\DeclareFontFamily{U}{mathx}{\hyphenchar\font45}
\DeclareFontShape{U}{mathx}{m}{n}{
      <5> <6> <7> <8> <9> <10>
      <10.95> <12> <14.4> <17.28> <20.74> <24.88>
      mathx10
      }{}
\DeclareSymbolFont{mathx}{U}{mathx}{m}{n}
\DeclareMathAccent{\widecheck}{0}{mathx}{"71}
\DeclareMathAccent{\wideparen}{0}{mathx}{"75}
\numberwithin{equation}{section}
\theoremstyle{plain}
\newtheorem{theorem}{Theorem}[section]
\newtheorem{lemma}[theorem]{Lemma}
\theoremstyle{definition}
\newtheorem{definition}[theorem]{Definition}
\newtheorem{remark}[theorem]{Remark}
\newcommand{\sqcoversupset}{\raisebox{1.3ex}{\,\rotatebox{180}{$\sqsubseteq$}}\,}
\newcommand{\sqcoversubset}{\raisebox{1.3ex}{\,\rotatebox{180}{$\sqsupseteq$}}\,}
\newcommand{\existunique}{\exists \textit{\bf !}}
\begin{document}

%%%%%%%%%%%%%%%%%%%%%%%%%%%%%%%%%%%%%%%%%%%%%%%%%%%%%%%%%%%%%%%%%%%%%%
%% FRONT MATTER
%%%%%%%%%%%%%%%%%%%%%%%%%%%%%%%%%%%%%%%%%%%%%%%%%%%%%%%%%%%%%%%%%%%%%%

\title[The logic of quantum mechanics]{The logic of quantum mechanics}

\author[Eric Buffenoir]{Eric Buffenoir}
\address{Universit\'e de la C\^ote d'Azur, CNRS, InPhyNi, FRANCE}
\email{eric.buffenoir@cnrs.fr}

%% Thanks (Optional)
%\thanks{The research of the first author was supported by ...}

%% Dedication (Optional)
%\dedicatory{This article is dedicated to AU authors, present and future}

%% 2020 AMS subject classification; see http://www.ams.org/msc
%% List classification codes in order of relevance
\subjclass{81P10,81P13, 18B35}

%% Key words and phrases
\keywords{Logical foundations of quantum mechanics; quantum logic (quantum-theoretic aspects) / Contextuality in quantum theory / Preorders, orders, domains and lattices (viewed as categories)}

\begin{abstract}
The quantum logic program originated in a 1936 article by G. Birkhoff and J. von Neumann. This program is generally disregarded due to no-go theorems restricting the existence of the tensor product of elementary quantum logics and, above all, the impossibility of considering entangled states and Bell non-local states within the framework of these composite quantum logics. We revisit this study from the beginning and reverse the perspective. Here, the existence of a tensor product and a star involution are the only prerequisites for the definition of the state spaces. Surprisingly, the quantum logics constructed in this way turn out to have a close connection with irreducible Hilbert geometries, even though we did not impose this sort of structure ab initio.  Endly, the existence of some basic quantum-like properties is explicitly proven in our framework : contextuality,  no-broadcasting theorem, and Bell non-locality. These elements demonstrate that our quantum logic program is capable of achieving G. Birkhoff and J. von Neumann's initial ambition of founding quantum theory.
\end{abstract}

\maketitle

%%%%%%%%%%%%%%%%%%%%%%%%%%%%%%%%%%%%%%%%%%%%%%%%%%%%%%%%%%%%%%%%%%%%%%
%% MAIN MATTER
%%%%%%%%%%%%%%%%%%%%%%%%%%%%%%%%%%%%%%%%%%%%%%%%%%%%%%%%%%%%%%%%%%%%%%

\section{Introduction}

%The tools and concepts of information theory have taken on a prominent role in theoretical physics, ultimately creating a real bridge between two communities. The study of problems related to the quantization of gravity is one of the main examples of this. The various paradigms proposed in this context, such as the $EPR=ER$ heuristic, are generally motivated by the hope of validating the law of conservation of information through the process of black hole evaporation. Strangely enough, the views adopted on the very notion of information in quantum mechanics remain highly dependent on the Hilbert-Von Neumann formalism, even though a detailed analysis of the paradoxes that led to these heuristics seems to seriously question the direct application of this formalism in the presence of an event horizon.  Parallel to these developments, which seem to invite research into the informational foundations of quantum mechanics, research on the axiomatics of quantum mechanics continued to develop. 
The history of foundational research on quantum theory is as old as quantum mechanics itself: the ad hoc nature of the formalism (Hilbert space over the complex numbers, tensor product for composite objects, etc.) had prompted Von Neumann himself to seek a radically new approach:

\begin{quote}
"I would like to make a confession which may seem immoral: I do not believe absolutely in Hilbert space any more. After all, Hilbert space (as far as quantum mechanical things are concerned) was obtained by generalizing Euclidean space, based on the principle of ‘conserving the validity of all formal rules’ [$\cdots$]. Now we begin to believe that it is not the vectors that matter, but the lattice of all linear (closed) subspaces.  Because: 1) The vectors ought to represent the physical states, but they do so redundantly, up to a complex factor, only 2) and besides, the states are merely a derived notion, the primitive (phenomenologically given) notion being the qualities which correspond to the linear closed subspaces"\cite{BirkhoffA}
\end{quote}

The internal motivations for such axiomatic research in quantum mechanics are ultimately quite natural. First, it is a matter of justifying, on the basis of strictly operational principles, the “categorical” nature of the mathematical objects involved in the description of quantum systems (justification of linearity,  complex numbers, tensor products, etc.), which lead to the specific properties of quantum mechanics (entanglement, no-cloning, teleportation, etc.). The most significant contributions of this axiomatic approach fall within this framework. The next step is to place the standard description of quantum mechanics within a broader framework, making it possible to understand the specific singularity of quantum mechanics within this landscape of theories, and even to singularize possible generalizations.  Finally, the aim is to develop an {\em interpretation} of quantum theory, and it very quickly became apparent that this interpretation played a central role in the process of information acquisition by an observer about the observed quantum system.

The starting point for research into quantum logic was therefore provided by J. von Neumann and G. Birkhoff themselves  \cite{BirkhoffVonNeumann}. These authors conceived quantum logic in order to highlight the fundamental consequences of the notion of quantum superposition. In terms of the states of a physical system and the properties of that system, superposition means that the strongest property that is true for two distinct states is also true for states other than the two given states.  It easily follows that the distributive property of the lattice of properties collapses (whereas it is natural for the classical case). G. Birkhoff and J. von Neumann, as well as many others, believed that understanding the logical structure of superposition was the key to gaining a better understanding of quantum theory as a whole.

Subsequently, C. Piron's important work was to characterize this lattice of testable properties \cite{Piron1964} without any a priori reference to an underlying Hilbert space. From a conceptual point of view, Piron's description of the space of testable properties (unlike Mackey's program \cite{Mackey}) completely disregards the notion of quantum probability, which therefore appears to be a derived notion (obviously, quantum probabilities are “rediscovered” using results related to Gleason's Theorem \cite{Piron1972}). This lattice of testable properties appears {\it de facto} as a generalized Hilbert lattice, i.e., an orthomodular, atomistic, irreducible lattice satisfying the covering relation. It should be noted that some subsequent results \cite{Holland}\cite{Mayet} have been able to demonstrate, in infinite dimensions, the singular role played by the field of complex numbers in this construction. From a technical point of view, the program developed under Piron's leadership makes extensive use of the results established by Faure and Froelicher \cite{FaureFroelicher} on dualities in projective geometry (see \cite{Stubbe} for a shortened presentation), which gives these results a general character (in particular, these results prove to be relevant even for infinite-dimensional Hilbert spaces).  

The Achilles heel of quantum logic is that it fails {\it a priori} to elegantly capture the composition of quantum systems, i.e., the way we describe multipartite quantum experiments once is given the way to describe individual quantum systems. Indeed, Foulis and Randall \cite{FoulisRandall} have shown that there is no canonical object $\mathcal{L}_A \otimes \mathcal{L}_B$ in the category of ortho-algebras that: (1) would contain $\mathcal{L}_A$ and $\mathcal{L}_B$ as sublogics, (2) would preserve orthogonality relations, (3) would allow the representation of quantum correlations observed in practice.
This preliminary result was clarified by Pulmannova \cite{Pullmanova} and then by R. Hudson and S. Pullmanova \cite{HudsonPullmanova} who demonstrated this result including completions of the product of two quantum logics.  
In another work, Pykacz and Santos \cite{Pykacz} considered two quantum systems of dimension 2 (qubits) and constructed a tensor product $\mathcal{L}_A \otimes \mathcal{L}_B$ between the lattices of projectors on $\mathcal{H}_A$ and $\mathcal{H}_B$. They demonstrated that any measure on this tensor product corresponding to a separable state necessarily satisfies Bell’s inequalities. In conclusion, it seems that any logical product constructed from local lattices (even MacNeille’s completion of this product) contains only product states, never violating Bell’s inequalities.  Pure entangled states (such as Bell states) cannot therefore appear in these logics because they require correlations that are impossible to construct logically from local products. This is a fundamental obstruction to any attempt at local logic for quantum composites.

However, from the outset in 1935, Schrödinger promoted the idea that what truly characterizes quantum behavior is precisely the way quantum systems are composed. Over the past 40 years, we have seen ample evidence to support this claim. Quantum non-locality has been confirmed experimentally, and the emphasis on quantum information has been reinforced.

Under the impetus of S. Abramsky and B.Coecke, a new categorical approach to quantum mechanics has been developed over the last twenty years \cite{AbramskyCoecke}. It makes central use of the properties of $\dagger-$categories.  Its ability to describe the main quantum phenomena has been analyzed \cite{AbramskyCoecke2} and its link to the operational description of quantum mechanics has been clarified \cite{AbramskyHeunen}.  In the latter work the state/property duality, which is at the heart of the axiomatic construction, appears as a Chu duality\footnote{Chu categories are seemingly fairly elementary generalizations of topological spaces, which have proved fundamental in the study of $*$-autonomous categories, but which also allow many common categories to be formalized.}.  In \cite[Theorem 3.15]{Abramsky2012}, S. Abramsky makes explicit the fact that the {\em Projective quantum symmetry groupoid} $PSymmH$\footnote{The objects of this category are the Hilbert spaces of dimension greater than two, and the morphisms are the orbits on semi-unitary maps (i.e. unitary or anti-unitary) under the $U(1)$ group action, which are the relevant symmetries of Hilbert spaces from the point of view of quantum mechanics.} is fully and faithfully represented by the category $bmChu_{[0,1]}$, i.e., by the sub-category of the category of bi-extensional Chu spaces associated with the evaluation set $[0,1]$ obtained by restricting it to Chu morphisms $(f_\ast,f^\ast)$ for which $f_\ast$ is injective. 
This result suggests that Chu categories could have a central role in the construction of axiomatic quantum mechanics as they provide a natural characterization of the automorphisms of the theory.  More surprisingly, and interestingly for us,  S. Abramsky shows that the aforementioned representation of $PSymmH$ is 'already' full and faithfull if we replace the evaluation space of the Chu category by a \underline{three-element set}, where the three values represent "definitely yes", "definitely no" and "maybe" \cite[Theorem 4.4]{Abramsky2012}. This possibility clearly points to the semantics developed as part of Piron's program. However, even though he suggests it in this article, S. Abramsky did not take the reverse path that would lead from a semantics adapted to the operational description of quantum mechanics to a categorical description based on Chu categories. This is the path we are taking ourselves.

We cannot overlook the research program on Generalized Probabilistic Theories (GPT) (see \cite{Plavala} for an excellent introduction adapted to our purpose).  This prolific program has its origins in the initial work of Mackey \cite{Mackey}, Ludwig \cite{Ludwig} and Kraus \cite{Kraus} (see also \cite{Janotta} for a more recent review) and exploits the fundamental property of quantum phenomena, namely that the results of these tests carried out on collections of samples, prepared independently and in a similar manner, have reproducible relative frequencies. A physical state will therefore be defined as a probability vector corresponding to the results of a well-chosen set of tests. Programs for reconstructing quantum mechanics  \cite{Hardy1}\cite{Hardy2}\cite{Mueller}\cite{Chiribella} then proceed to select a set of postulates restricting the state space thus defined, and prescriptions for describing complex situations based on simple cases. While this formal framework seems sufficiently close to the usual view of quantum mechanics to allow the initial postulates to be meaningful from the point of view of the physics described, it has nonetheless distanced us from the initial topic for several reasons. First, while the observer contributes to giving substance to the notions of preparation, operation, measurement, etc., the process of information acquisition by the observer has completely disappeared \cite{Barrett}. Secondly, the notion of state has lost all meaning for a single sample from a given preparation; it has now only meaning for an infinite collection of samples prepared independently and in a similar manner. Finally, to clarify the underlying structure of the family of tests structuring the state space, the promoters of this program generally use combinatorial tools that remain deeply attached to the situation of finite-dimensional systems, which removes some of the hope of generality inextricably linked to any axiomatic approach.

Other research programs have attempted to circumvent these difficulties and devote particular attention to the process of information acquisition by an observer through a binary test on a singular sample \cite{Rovelli}\cite{Zeilinger}\cite{Spekkens}\cite{Hoehn}. To simplify these authors' arguments, quantum description refers to two types of fundamental principles. First, the amount of information that can be acquired about a quantum system is fundamentally limited. However, it is always possible to identify tests whose outcome is indeterminate and which will therefore produce new information when the associated measurement is performed. This new acquisition of information will be paid for by the destruction of an equal amount of previously acquired information. It should be noted, however, that these programs generally make use of complementary principles relating to the convexity of the space in question (these principles ultimately conceal the principle of superposition). It should also be noted that these programs exploit the ‘combinatorics’ linked to the incompatibility of the tests that can be performed on the system to construct the state space. In this respect, they are very similar to the work exploiting the particular orthomodular nature of the lattice of properties developed under the impetus of C. Piron.  However, these programs do not abandon the probabilistic perspective traditionally adopted for the states of a quantum system.

Simon Kochen and Ernst Specker \cite{KochenSpecker} introduced Partial Boolean Algebras (PBAs) as an alternative to the quantum logic of Birkhoff–von Neumann and Piron’s axiomatic reconstruction. Unlike von Neumann, who proposed a global non-distributive lattice of Hilbert space subspaces, and Piron, who aimed to axiomatize this lattice to recover Hilbert space, Kochen emphasizes that quantum logic should remain local: each family of compatible observables forms an ordinary Boolean algebra, but there is no single global structure unifying them.  PBAs capture precisely this patchwork of overlapping Boolean algebras, reflecting the contextuality shown by the Kochen–Specker theorem \cite{KochenSpecker}, and thus provide a more flexible account of the logic of quantum propositions.

My current research \cite{Buffenoir2021}\cite{Buffenoir2022}\cite{Buffenoir2025} aims to develop a strategy for circumventing the aforementioned no-go theorems, relating to the non-existence of a tensor product that can be used to describe composite systems based on the logical description of individual systems. This research program adopts a formalism very close to that initially adopted by G. Birkhoff and J. von Neumann and therefore completely abandons the categorical approach used by S. Abramsky and Al.   From a certain point of view, our approach is inspired by Kochen's, as it restricts involution to be only a partial application (the conjunction is global).\\
My program is exhaustive in that it covers all aspects of an operational approach to quantum mechanics, from general considerations of the operational approach to consequences relating to contextuality and the existence of non-local states according to Bell. Our quite long presentation therefore covers all these aspects.

In Section 2, we summarize the formalism of States/Effects Chu spaces and of corresponding morphisms. As mentioned before, these Chu spaces are GPT like descriptions based on the "boolean domain" as a target space (the three elements set equipped with some structures, see properties (\ref{orderB}) to (\ref{Binvol})). Then, we introduce our fundamental notion of {\em real structure} putting a distinction between {\em real states} and {\em hidden states} through the definition of a star involution. At the end of Section 2, we clarify our definitions of determinism/indeterminism in our framework. 

Section 3 and Section 5 are devoted to clarify the notion of real structure by showing how the generalized spaces of states owning hidden states are (i) obtained naturally by a completion procedure from real \underline{indeterministic} spaces of states (ii) linked to \underline{contextual} empirical models associated to the operational descriptions of these real spaces of states. 

Section 4 clarifies our requirements about tensor products of spaces of states in order to describe compound systems from the description of individual systems. We note that the basic solution to these requirements allows also the description of compound deterministic (classical) systems.  

Section 6 completes this analysis by giving a proposal for the description of compound indeterministic systems.  This proposal exploits our description of ontic completions presented along Section 3. 

However this description is rather general and Section 7 enters into a more detailed description of the iterated tensor product of elementary indeterministic systems (to describe several quantum bits).  This analysis exhibits a structure that is as close as possible from \underline{Hilbert geometries}.  

Section 8 shows that our description exhibits some fundamental quantum-like properties : \underline{no-broadcasting theorem} and  \underline{existence of {\em Bell non-local} states}.

\section{The operational formalism}\label{sectiongeneral}

Adopting the operational perspective on the considered experiments, we will introduce the following definitions.\\
A {\em preparation process} is an objectively defined, and thus 'repeatable', experimental sequence that allows singular samples of a certain physical system to be produced, in such a way that we are able to submit them to tests. 
We will denote by ${ \mathfrak{P}}$ the set of preparation processes (each element of ${ \mathfrak{P}}$ can be equivalently considered as the collection of samples produced through this preparation procedure). 
\\
For each  {\em property}, that the observer aims to test macroscopically on {\em any particular sample} of the considered micro-system, it will be assumed that the observer is able to define (i) some detailed 'procedure', in reference to the modes of use of some experimental apparatuses chosen to perform the operation/test, and (ii) a 'rule' allowing the answer 'yes' to be extracted if the macroscopic outcome of the experiment conforms with the expectation of the observer, when the test is performed on any input sample (as soon as this experimental procedure can be opportunely applied to this particular sample). 
These operations/tests, designed to determine the occurrence of a given property for a given sample, will be called {\em yes/no tests} {\em associated with this property}.  The set of 'yes/no tests' at the disposal of the observer will be denoted by ${ \mathfrak{T}}$. 

A yes/no test ${ \mathfrak{t}}\in { \mathfrak{T}}$ will be said to be {\em positive with certainty} (resp. {\em negative with certainty}) relatively to a preparation process ${ \mathfrak{p}}\in { \mathfrak{P}}$ iff the observer is led to affirm that the result of this test, realized on any of the particular samples that could be prepared according to this preparation process, would be 'positive with certainty' (resp. would be 'negative with certainty'), 'should' this test be effectuated. If the yes/no test can not be stated as 'certain', this yes/no test will be said to be {\em indeterminate}. \\ Concretely, the observer can establish the 'certainty' of the result of a given yes/no test on any given sample issued from a given preparation procedure, by running the same test on a sufficiently large (but finite) collection of samples issued from this same preparation process: if the outcome is always the same, the observer will be led to claim that similarly prepared 'new' samples would also produce the same result, if the experiment was effectuated. 
To summarize, for any preparation process ${ \mathfrak{p}}$ and any yes/no test ${ \mathfrak{t}}$, the evaluation ${ \mathfrak{e}}({ \mathfrak{p}},{ \mathfrak{t}})$ is an element of the set ${ \mathfrak{B} } := \{{ \bot}, { \rm \bf Y}, { \rm \bf N}\}$ \footnote{
The set ${ \mathfrak{B}}:=\{\textit{\bf Y},\textit{\bf N},\bot\}$ will be equipped with the following poset structure : 
\begin{eqnarray}
\forall u,v\in { \mathfrak{B} },&& (u\leq v)\; :\Leftrightarrow\; (u={ \bot}\;\;\textit{\rm or}\;\; u=v).\label{orderB}
\end{eqnarray} 
$({ \mathfrak{B}},\leq)$ is also an Inf semi-lattice which infima (resp. suprema) will be denoted $\bigwedge$ (resp. $\bigvee$). We have
\begin{eqnarray}
\forall x,y\in { \mathfrak{B} },&&  x \wedge y = \left\{\begin{array}{ll} x & \textit{\rm if}\;\;\; x=y\\
\bot & \textit{\rm if}\;\;\; x\not= y\end{array}\right.
\end{eqnarray}
We will also introduce a commutative monoid law denoted $\bullet$ and defined by
\begin{eqnarray}
\forall x\in { \mathfrak{B}},&& x \bullet \textit{\bf Y}=x, \;\;\;\;\; x \bullet \textit{\bf N}=\textit{\bf N}, \;\;\;\;\; \bot \bullet \bot = \bot. \label{expressionbullet}
\end{eqnarray}
This product law verifies the following properties
\begin{eqnarray}
\forall x\in { \mathfrak{B}},\forall B\subseteq { \mathfrak{B}} && x \bullet \bigwedge B=\bigwedge{}_{b\in B}(x\bullet b),\label{distributivitybullet}\\
\forall x\in { \mathfrak{B}},\forall C\subseteq_{Chain} { \mathfrak{B}} && x \bullet \bigvee C =\bigvee{}_{c\in C}(x\bullet c).\label{distributivitybullet2}
\end{eqnarray}
$({ \mathfrak{B} },\leq)$ will be also equipped with the following involution map :
\begin{eqnarray}
\overline{{ \bot}}:={ \bot} \;\;\;\;\;\;\;\; \overline{\textit{\bf Y}}:={\textit{\bf N}}\;\;\;\;\;\;\;\; \overline{\textit{\bf N}}:={\textit{\bf Y}}.\label{Binvol}
\end{eqnarray} 
$({ \mathfrak{B} },\leq)$ will be called {\em the boolean domain}.
} defined to be ${ \bot}$ (alternatively, ${ \rm \bf Y}$ or ${ \rm \bf N}$) if the outcome of the yes/no test ${ \mathfrak{t}}$ on any sample prepared according to the preparation procedure ${ \mathfrak{p}}$ is judged as 'indeterminate' ('positive with certainty' or 'negative with certainty', respectively) by the observer. We note that the partial order placed on ${ \mathfrak{B}}$ by equation (\ref{orderB}) characterizes the amount of information gathered by the observer. Precisely, when the determinacy of a yes/no test is established for an observer, we can consider that this observer possesses some elementary 'information' about the state of the system, whereas, in the 'indeterminate case', the observer has none.

This definition leads to a pre-order structure denoted by $\sqsubseteq_{{}_{ \mathfrak{P}}}$ on the space of preparations ${ \mathfrak{P}}$ and to a pre-order structure denoted by $\sqsubseteq_{{}_{ \mathfrak{T}}}$ on the space of tests ${ \mathfrak{T}}$ as shown in \cite{Buffenoir2021} :
\begin{eqnarray}
&& \forall { \mathfrak{p}}_1,{ \mathfrak{p}}_2\in { \mathfrak{P}}, \;\; ({ \mathfrak{p}}_1\sqsubseteq_{{}_{ \mathfrak{P}}} { \mathfrak{p}}_2)\;   :\Leftrightarrow  
(\; \forall { \mathfrak{t}}\in { \mathfrak{T}},\; { \mathfrak{e}}({ \mathfrak{p}}_1,{ \mathfrak{t}})\leq { \mathfrak{e}}({ \mathfrak{p}}_2,{ \mathfrak{t}}) \;)\\
&& \forall { \mathfrak{t}}_1, { \mathfrak{t}}_2\in { \mathfrak{T}},\;\; (\; { \mathfrak{t}}_1 \sqsubseteq_{{}_{ \mathfrak{T}}} { \mathfrak{t}}_2\;)  :\Leftrightarrow 
(\; \forall { \mathfrak{p}}\in { \mathfrak{P}},\; { \mathfrak{e}}({ \mathfrak{p}},{{ \mathfrak{t}}_1})\leq { \mathfrak{e}}({ \mathfrak{p}},{{ \mathfrak{t}}_2}) \;).
\end{eqnarray}

If we consider a collection of preparation processes, we can define a new preparation procedure called {\em mixture} as follows. The samples produced from the mixtured preparation procedure are obtained by a random mixing of the samples issued from the preparation processes of the considered collection indiscriminately.  As a consequence, the statements that the observer can establish after a sequence of tests on these samples produced through the mixtured procedure is given as the infimum of the statements that the observer can establish for the elements of the collection separately. \\
We will also assume that there exists a preparation process, unique from the point of view of the statements that can be produced about it, that can be interpreted as a 'randomly-selected' collection of 'un-prepared samples'. This element leads to complete indeterminacy for any yes/no test realized on it. \\
An equivalence relation can be defined from the previously defined pre-order on the set of preparations ${ \mathfrak{P}}$.  Two preparation processes are identified iff the statements established by the observer about the corresponding prepared samples are identical.  A {\em state} of the physical system is an equivalence class of preparation processes corresponding to the same informational content.
The set of equivalence classes will be called {\em space of states} and denoted ${ { \mathfrak{S}}}$.
%\footnote{$\forall { \mathfrak{p}}_1,{ \mathfrak{p}}_2\in { \mathfrak{P}}, \;\; ({ \mathfrak{p}}_1\sim_{{}_{ \mathfrak{P}}} { \mathfrak{p}}_2)\;   :\Leftrightarrow  (\; \forall { \mathfrak{t}}\in { \mathfrak{T}},\; { \mathfrak{e}}({ \mathfrak{p}}_1,{ \mathfrak{t}})= { \mathfrak{e}}({ \mathfrak{p}}_2,{ \mathfrak{t}}) \;)$ ,$\lceil {{ \mathfrak{p}}} \rceil  :=  \{\, { \mathfrak{p}}'\in { \mathfrak{P}}\;\vert\; { \mathfrak{p}}'\sim_{{}_{ \mathfrak{P}}} { \mathfrak{p}}\,\}$ and ${ { \mathfrak{S}}}  :=  \{\, \lceil {{ \mathfrak{p}}} \rceil \;\vert\; { \mathfrak{p}}\in { \mathfrak{P}}\,\}$. } 
\begin{eqnarray}
&& \forall { \mathfrak{p}}_1,{ \mathfrak{p}}_2\in { \mathfrak{P}}, \;\; ({ \mathfrak{p}}_1\sim_{{}_{ \mathfrak{P}}} { \mathfrak{p}}_2)\;   :\Leftrightarrow  
(\; \forall { \mathfrak{t}}\in { \mathfrak{T}},\; { \mathfrak{e}}({ \mathfrak{p}}_1,{ \mathfrak{t}})= { \mathfrak{e}}({ \mathfrak{p}}_2,{ \mathfrak{t}}) \;)\\
&& \lceil {{ \mathfrak{p}}} \rceil  :=  \{\, { \mathfrak{p}}'\in { \mathfrak{P}}\;\vert\; { \mathfrak{p}}'\sim_{{}_{ \mathfrak{P}}} { \mathfrak{p}}\,\}\\
&& { { \mathfrak{S}}}  :=  \{\, \lceil {{ \mathfrak{p}}} \rceil \;\vert\; { \mathfrak{p}}\in { \mathfrak{P}}\,\}
\end{eqnarray}
The space of states inherits a pointed partial order structure %denoted by $\sqsubseteq_{{}_{ \mathfrak{S}}}$ 
and an Inf semi-lattice structure% denoted $\bigsqcap{}^{{}^{ \mathfrak{S}}}$
.\\
If we consider a collection of tests, we can define a new test called {\em mixture} as follows. The result obtained for the mixtured test is obtained by a random mixing of the results issued from the tests of the considered collection indiscriminately.  As a consequence, the statements that the observer can establish after a sequence of tests is given as the infimum of the statements that the observer can establish for each test separately.

An equivalence relation can be defined from the previously defined pre-order on the set of yes/no tests ${ \mathfrak{T}}$.  An {\em effect} of the physical system is an equivalence class of yes/no tests, i.e., a class of yes/no tests that are not distinguished from the point of view of the statements that the observer can produce by using these yes/no tests on finite collections of samples. The set of equivalence classes of yes/no tests will be denoted by ${ \mathfrak{E}}$. 
%\footnote{ $\forall { \mathfrak{t}}_1, { \mathfrak{t}}_2\in { \mathfrak{T}},\;\; (\; { \mathfrak{t}}_1 \sim_{{}_{ \mathfrak{T}}} { \mathfrak{t}}_2\;)  :\Leftrightarrow (\; \forall { \mathfrak{p}}\in { \mathfrak{P}},\; { \mathfrak{e}}_{{ \mathfrak{t}}_1}({ \mathfrak{p}})= { \mathfrak{e}}_{{ \mathfrak{t}}_2}({ \mathfrak{p}}) \;)$, $\lfloor { \mathfrak{t}} \rfloor  :=  \{\, { \mathfrak{t}}'\in { \mathfrak{T}}\;\vert\; { \mathfrak{t}}'\sim_{{}_{ \mathfrak{T}}} { \mathfrak{t}}\,\}$ and  ${ \mathfrak{E}}  :=  \{\,\lfloor { \mathfrak{t}} \rfloor \;\vert\; { \mathfrak{t}}\in { \mathfrak{T}}\,\}$. }
\begin{eqnarray}
&& \forall { \mathfrak{t}}_1, { \mathfrak{t}}_2\in { \mathfrak{T}},\;\; (\; { \mathfrak{t}}_1 \sim_{{}_{ \mathfrak{T}}} { \mathfrak{t}}_2\;)  :\Leftrightarrow 
(\; \forall { \mathfrak{p}}\in { \mathfrak{P}},\; { \mathfrak{e}}({ \mathfrak{p}},{{ \mathfrak{t}}_1})= { \mathfrak{e}}({ \mathfrak{p}},{{ \mathfrak{t}}_2}) \;)\\
&& \lfloor { \mathfrak{t}} \rfloor  :=  \{\, { \mathfrak{t}}'\in { \mathfrak{T}}\;\vert\; { \mathfrak{t}}'\sim_{{}_{ \mathfrak{T}}} { \mathfrak{t}}\,\}\\
&& { \mathfrak{E}}  :=  \{\,\lfloor { \mathfrak{t}} \rfloor \;\vert\; { \mathfrak{t}}\in { \mathfrak{T}}\,\}.
\end{eqnarray}
The space of effects inherits a partial order structure %denoted by $\sqsubseteq_{{}_{ \mathfrak{E}}}$ 
and an Inf semi-lattice structure% denoted $\bigsqcap{}^{{}^{ \mathfrak{E}}}$
.\\
We then derive a map ${\epsilon}$ according to the following definition :
\begin{eqnarray}
\begin{array}{rcrcl}
{\epsilon}^{ \mathfrak{S}} & : &{ \mathfrak{E}} & \rightarrow & { \mathfrak{B} }{}^{ \mathfrak{S}} \\
& &{ \mathfrak{l}} & \mapsto & {\epsilon}^{ \mathfrak{S}}_{ { \mathfrak{l}}} \;\;\;\;\vert\;\;\;\;\; {\epsilon}^{ \mathfrak{S}}_{ \lfloor { \mathfrak{t}} \rfloor}(\lceil { \mathfrak{p}} \rceil):={ \mathfrak{e}}({ \mathfrak{p}},{ \mathfrak{t}}),\; \forall { \mathfrak{p}}\in { \mathfrak{P}},\forall { \mathfrak{t}}\in { \mathfrak{T}}.\label{defetilde}
\end{array}
\end{eqnarray}

\subsection{States/Effects Chu spaces}\label{subsectiongeneralized}

According to the previous considerations

\begin{definition}\label{definitiongeneralizedspaceofstates}
We will define {\em a space of states} (denoted ${ \mathfrak{S}}$) to be a poset satisfying
\begin{itemize}
\item ${ \mathfrak{S}}$ is a down complete Inf semi-lattice 
\item ${ \mathfrak{S}}$ admits a bottom element (denoted $\bot_{{}_{ \mathfrak{S}}}$).
\end{itemize}
The Infimum of a family $\{\,\sigma_i\;\vert\; i\in I\,\}\subseteq { \mathfrak{S}}$ will be denoted $\bigsqcap{}^{{}^{{ \mathfrak{S}}}}_{i\in I}\sigma_i$. The associated partial order on ${ \mathfrak{S}}$ will be denoted by $\sqsubseteq_{{}_{{ \mathfrak{S}}}}$.
\end{definition}

Here and in the following, $\widehat{\Sigma\Sigma'}{}^{{}^{ \mathfrak{S}}}$ will mean that $\Sigma$ and $\Sigma'$ have a common upper-bound in ${ \mathfrak{S}}$, and $\neg \widehat{\Sigma\Sigma'}{}^{{}^{ \mathfrak{S}}}$ means they have none.  More generally, for any subset $S\subseteq { \mathfrak{S}}$, the logical value $\widehat{\;S\;}{}^{{}^{ \mathfrak{S}}}$ will be "TRUE" iff the elements of $S$ admit a common upper-bound in ${ \mathfrak{S}}$.\\
The supremum of two elements $\sigma$ and $\sigma'$ in ${ \mathfrak{S}}$, when it exists (i.e. when $\sigma$ and $\sigma'$ have a common upper-bound), will be denoted $\sigma \sqcup_{{}_{{ \mathfrak{S}}}} \sigma'$. \\
We will also adopt the following notations :   $\uparrow^{{}^{ \mathfrak{S}}}\!\!\Sigma$ (resp. $\downarrow_{{}_{ \mathfrak{S}}}\!\!\Sigma$) for the upper subset (resp. the lower subset) $\{ \sigma\in { \mathfrak{S}} \;\vert\; \sigma\sqsupseteq_{{}_{{ \mathfrak{S}}}}\Sigma\}$ (resp. $\{ \sigma\in { \mathfrak{S}} \;\vert\; \sigma\sqsubseteq_{{}_{{ \mathfrak{S}}}}\Sigma\}$). We will also denote $\sigma \parallel_{{}_{ \mathfrak{S}}}\sigma'$ for the logical value ($\sigma \not\sqsubseteq_{{}_{ \mathfrak{S}}}\sigma'$ and $\sigma' \not\sqsubseteq_{{}_{ \mathfrak{S}}}\sigma$).\\
Endly, we will introduce the "covering" relation denoted $\sqcoversubset_{{}_{{ \mathfrak{S}}}}$ which means :
\begin{eqnarray}
\forall \alpha,\beta\in { \mathfrak{S}},\;\;\alpha \sqcoversubset_{{}_{{ \mathfrak{S}}}}\beta &:\Leftrightarrow & (\,\alpha \sqsubset_{{}_{{ \mathfrak{S}}}}\beta\;\;\textit{\rm and}\;\; \forall \gamma\in { \mathfrak{S}},\;\;\;\alpha\sqsubset_{{}_{{ \mathfrak{S}}}}\gamma\sqsubseteq_{{}_{{ \mathfrak{S}}}}\beta\;\;\Rightarrow\;\;\gamma=\beta\,).\;\;\;\;\;\;\;\;\;\;\;\;\;\;\;\label{defcoversubset}
\end{eqnarray}

\begin{definition}\label{FiniteRankCondition}
We will say that the space of states ${ \mathfrak{S}}$ satisfies the {\em Finite Rank Condition} iff 
\begin{eqnarray}
\forall U\subseteq { \mathfrak{S}}\;\vert\; \widehat{\;U\;}{}^{{}^{{ \mathfrak{S}}}},&&\exists \;\textit{\rm $V$ finite subset of ${ \mathfrak{S}}$ with}\nonumber\\&& (\forall \sigma\in V,\exists \sigma'\in U, \sigma\sqsubseteq_{{}_{{ \mathfrak{S}}}}\sigma') \;\textit{\rm and}\;(\bigsqcup{}^{{}^{{ \mathfrak{S}}}}  U=\bigsqcup{}^{{}^{{ \mathfrak{S}}}}  V).\;\;\;\;\;\;\;\;\;\;\;\;\;\;\label{finiterank}
\end{eqnarray}
\end{definition}

\begin{definition}\label{defpurestates}
We will say that {\em the space of states ${ \mathfrak{S}}$ admits a description in terms of pure states} iff we have moreover
that the set of complely meet-irreducible elements of ${ \mathfrak{S}}$, denoted ${ \mathfrak{S}}^{{}^{pure}}$ and called {\em set of pure states}, is equal to the set of maximal elements $Max({ \mathfrak{S}})$ and it 
 is a generating set for ${ \mathfrak{S}}$, i.e.  
\begin{eqnarray}
&&\forall \sigma \in { \mathfrak{S}}, \;\; \sigma= \bigsqcap{}^{{}^{ { \mathfrak{S}}}}  \underline{\sigma}_{{}_{ { \mathfrak{S}}}} ,\;\;\textit{\rm where}\;\;
\underline{\sigma}_{{}_{ { \mathfrak{S}}}}:=
({ \mathfrak{S}}^{{}^{pure}} \cap (\uparrow^{{}^{ { \mathfrak{S}}}}\!\!\!\! \sigma) ) \;\;\textit{\rm and}\;\; { \mathfrak{S}}^{{}^{pure}}=Max({ \mathfrak{S}}).\;\;\;\;\;\;\;\;\;\;\;\;
\label{completemeetirreducibleordergenerating}
\end{eqnarray}
Except if explicitly mentioned, we will not explicitly assume that the considered spaces of states admit a description in terms of pure states.
\end{definition}

\begin{definition}\label{defgeneralizedeffectspace}
The {\em space of effects}, denoted ${ \mathfrak{E}}_{ \mathfrak{S}}$ is defined to be 
\begin{eqnarray}
{ \mathfrak{E}}_{ \mathfrak{S}}:=\{ { \mathfrak{l}}_{{}_{(\sigma,\sigma')}}\;\vert\; \sigma,\sigma'\in { \mathfrak{S}},\;\neg \widehat{\sigma\sigma'}{}^{{}^{ \mathfrak{S}}}\;\}\cup \{ { \mathfrak{l}}_{{}_{(\sigma,\centerdot)}}\;\vert\; \sigma\in { \mathfrak{S}}\;\}\cup \{ { \mathfrak{l}}_{{}_{(\centerdot,\sigma)}}\;\vert\; \sigma\in { \mathfrak{S}}\;\}\cup \{\, { \mathfrak{l}}_{{}_{(\centerdot,\centerdot)}}\,\} &&\;\;\;\;\;\;\;\;
\end{eqnarray}
as a set, with the following Inf semi-lattice law
\begin{eqnarray}
{ \mathfrak{l}}_{{}_{(\sigma_1,\sigma'_1)}} \sqcap_{{}_{{ \mathfrak{E}}_{ \mathfrak{S}}}} { \mathfrak{l}}_{{}_{(\sigma_2,\sigma'_2)}} &=& \left\{ \begin{array}{lcl}
{ \mathfrak{l}}_{{}_{(\sigma_1\sqcup_{{}_{{ \mathfrak{S}}}} \sigma_2,\sigma'_1 \sqcup_{{}_{{ \mathfrak{S}}}} \sigma'_2)}} & \textit{\rm if} & \widehat{\sigma_1\sigma_2}{}^{{}^{ \mathfrak{S}}}\;\textit{\rm and}\; \widehat{\sigma'_1\sigma'_2}{}^{{}^{ \mathfrak{S}}}\\
{ \mathfrak{l}}_{{}_{(\;\centerdot\;,\sigma'_1 \sqcup_{{}_{{ \mathfrak{S}}}} \sigma'_2)}} & \textit{\rm if} & \neg \widehat{\sigma_1\sigma_2}{}^{{}^{ \mathfrak{S}}}\;\textit{\rm and}\; \widehat{\sigma'_1\sigma'_2}{}^{{}^{ \mathfrak{S}}}\\
{ \mathfrak{l}}_{{}_{(\sigma_1\sqcup_{{}_{{ \mathfrak{S}}}} \sigma_2,\;\centerdot\;)}} & \textit{\rm if} & \widehat{\sigma_1\sigma_2}{}^{{}^{ \mathfrak{S}}}\;\textit{\rm and}\; \neg\widehat{\sigma'_1\sigma'_2}{}^{{}^{ \mathfrak{S}}}\\
{ \mathfrak{l}}_{{}_{(\centerdot,\centerdot)}} & \textit{\rm if} & \neg\widehat{\sigma_1\sigma_2}{}^{{}^{ \mathfrak{S}}}\;\textit{\rm and}\; \neg\widehat{\sigma'_1\sigma'_2}{}^{{}^{ \mathfrak{S}}}
\end{array}\right.\label{defcapES}
\end{eqnarray}
This expression is naturally extended to the whole set of effects (i.e. including the elements of the form ${ \mathfrak{l}}_{{}_{(\sigma,\centerdot)}}$, ${ \mathfrak{l}}_{{}_{(\centerdot,\sigma)}}$ and ${ \mathfrak{l}}_{{}_{(\centerdot,\centerdot)}}$) as long as we adopt the convention defining 
\begin{eqnarray}
\forall \sigma\in { \mathfrak{S}},\;\;\;\;\neg\widehat{\;\centerdot\;\sigma}{}^{{}^{ \mathfrak{S}}}:= \textit{\rm TRUE}\;\;\;\;\;\;\; &\textit{\rm and}& \;\;\;\;\;\;\; \neg\widehat{\;\centerdot\;\centerdot\;}{}^{{}^{ \mathfrak{S}}}:= \textit{\rm TRUE}.
\end{eqnarray}
\end{definition}

\begin{definition}
The evaluation map $\epsilon^{ \mathfrak{S}}$ is defined by
\begin{eqnarray}
\epsilon^{ \mathfrak{S}}_{{ \mathfrak{l}}_{{}_{(\sigma,\sigma')}}}(\sigma'') &:= & \left\{ \begin{array}{lcl} \textit{\bf Y} & \textit{\rm if} & \sigma\sqsubseteq_{{}_{ \mathfrak{S}}} \sigma''\\\textit{\bf N} & \textit{\rm if} & \sigma'\sqsubseteq_{{}_{ \mathfrak{S}}} \sigma''\\ \bot & \textit{\rm otherwise } &\end{array}\right.\label{defepsilonS}
\end{eqnarray}
this expression is naturally extended to the whole set of effects by adopting the following convention
\begin{eqnarray}
\forall \sigma\in { \mathfrak{S}},\;\; (\;\sigma\;\sqsubseteq_{{}_{{ \mathfrak{S}}}}\centerdot ) & :=&\textit{\rm TRUE}.
\end{eqnarray}
\end{definition}

\begin{definition}
We will adopt the following notations :
\begin{eqnarray}
\bot_{{}_{{ \mathfrak{E}}_{ \mathfrak{S}}}} & := & { \mathfrak{l}}_{{}_{(\centerdot,\centerdot)}}\\
{ \mathfrak{Y}}_{{}_{{ \mathfrak{E}}_{ \mathfrak{S}}}} & := & { \mathfrak{l}}_{{}_{(\bot_{{}_{{ \mathfrak{S}}}},\centerdot)}}\\
\forall \sigma_1,\sigma_2\in { \mathfrak{S}}\cup\{\centerdot\},\;\;\;\;\;\; \overline{{ \mathfrak{l}}_{{}_{(\sigma_1,\sigma_2)}}}& := & { \mathfrak{l}}_{{}_{(\sigma_2,\sigma_1)}}
\end{eqnarray}
\end{definition}

\begin{lemma}
The evaluation map $\epsilon^{ \mathfrak{S}}$ satisfies the following properties :
\begin{eqnarray}
\forall { \mathfrak{l}}\in { \mathfrak{E}}_{ \mathfrak{S}},\forall S\subseteq { \mathfrak{S}}, && { \epsilon}^{ \mathfrak{S}}_{ { \mathfrak{l}}}(\bigsqcap{}^{{}^{ \mathfrak{S}}} S)=\bigwedge{}_{{}_{\sigma\in S}}\; { \epsilon}^{ \mathfrak{S}}_{ { \mathfrak{l}}}(\sigma),
 \label{axiomsigmainfsemilattice}\\
 \forall \sigma\in { \mathfrak{S}},\forall E\subseteq { \mathfrak{E}}_{ \mathfrak{S}},&&{ \epsilon}^{ \mathfrak{S}}_{\bigsqcap{}^{{}^{ \mathfrak{E}_{ \mathfrak{S}}}} E}(\sigma)=\bigwedge{}_{{}_{{ \mathfrak{l}}\in E}}\; { \epsilon}^{ \mathfrak{S}}_{ { \mathfrak{l}}}(\sigma),\label{axiomEinfsemilattice}\\
 \forall { \mathfrak{l}}\in { \mathfrak{E}}_{ \mathfrak{S}}, \forall \sigma\in { \mathfrak{S}}, & & { \epsilon}^{ \mathfrak{S}}_{ \overline{\; \mathfrak{l}\;}}(\sigma)= \overline{{ \epsilon}^{ \mathfrak{S}}_{ { \mathfrak{l}}}(\sigma)},\label{etbar}\\
\forall \sigma\in { \mathfrak{S}}, & & { \epsilon}^{ \mathfrak{S}}_{{ \mathfrak{Y}}_{ \mathfrak{E}_{ \mathfrak{S}}}}(\sigma)= \textit{\bf Y},\label{ety}\\
\forall \sigma\in { \mathfrak{S}},  & & { \epsilon}^{ \mathfrak{S}}_{\bot_{ \mathfrak{E}_{ \mathfrak{S}}}}(\sigma)= \bot.
\end{eqnarray}
and 
\begin{eqnarray}
&& \forall { \mathfrak{l}},{ \mathfrak{l}}'\in { \mathfrak{E}}_{ \mathfrak{S}},\;\;\;\;\;\;\;\;\;\;\;\;(\, \forall \sigma\in { \mathfrak{S}},\; { \epsilon}^{ \mathfrak{S}}_{ { \mathfrak{l}}}(\sigma)={ \epsilon}^{ \mathfrak{S}}_{ { \mathfrak{l}}'}(\sigma) \,) \Leftrightarrow  (\, { \mathfrak{l}}= { \mathfrak{l}}' \,),\label{Chuextensional}\\%%%%
&& \forall \sigma,\sigma'\in { \mathfrak{S}},\;\;\;\;\;\;\;\;\;\;\;\;(\, \forall { \mathfrak{l}}\in { \mathfrak{E}}_{ \mathfrak{S}},\; { \epsilon}^{ \mathfrak{S}}_{ { \mathfrak{l}}}(\sigma)={ \epsilon}^{ \mathfrak{S}}_{ { \mathfrak{l}}}(\sigma') \,) \Leftrightarrow  (\, \sigma= \sigma' \,)\label{Chuseparated}%%%%
\end{eqnarray}
\end{lemma}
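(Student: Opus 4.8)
The plan is to derive every clause by unfolding the evaluation map~\eqref{defepsilonS} and the Inf semilattice law~\eqref{defcapES}, using only the order and monoid data on $\mathfrak{B}$ recorded in the footnote. The organising observation I would write down first is a reformulation of~\eqref{defepsilonS}: for $\sigma,\sigma'\in\mathfrak{S}$ the function $\epsilon^{\mathfrak{S}}_{\mathfrak{l}_{(\sigma,\sigma')}}\colon\mathfrak{S}\to\mathfrak{B}$ takes the value $\textit{\bf Y}$ exactly on the principal up-set $\uparrow^{\mathfrak{S}}\!\sigma$, the value $\textit{\bf N}$ exactly on $\uparrow^{\mathfrak{S}}\!\sigma'$, and $\bot$ everywhere else; the hypothesis $\neg\widehat{\sigma\sigma'}^{\mathfrak{S}}$ in Definition~\ref{defgeneralizedeffectspace} is precisely $\uparrow^{\mathfrak{S}}\!\sigma\cap\uparrow^{\mathfrak{S}}\!\sigma'=\varnothing$, so the three clauses are mutually exclusive and the map is well defined, and the degenerate effects $\mathfrak{l}_{(\sigma,\centerdot)},\mathfrak{l}_{(\centerdot,\sigma)},\mathfrak{l}_{(\centerdot,\centerdot)}$ fit the same picture on setting $\uparrow^{\mathfrak{S}}\!\centerdot:=\varnothing$ (consistently with the last displayed identity, which forces the $\centerdot$-clauses to be vacuous). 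Conversely each effect is recovered from the two fibres $(\epsilon^{\mathfrak{S}}_{\mathfrak{l}})^{-1}(\{\textit{\bf Y}\})$ and $(\epsilon^{\mathfrak{S}}_{\mathfrak{l}})^{-1}(\{\textit{\bf N}\})$, since a nonempty principal up-set determines its minimum and the empty set signals a $\centerdot$.

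With this in hand the three pointwise identities are immediate. As $\bot_{\mathfrak{S}}\sqsubseteq_{\mathfrak{S}}\sigma$ always, the $\textit{\bf Y}$-clause fires for $\mathfrak{Y}_{\mathfrak{E}_{\mathfrak{S}}}=\mathfrak{l}_{(\bot_{\mathfrak{S}},\centerdot)}$, which is~\eqref{ety}; no clause fires for $\mathfrak{l}_{(\centerdot,\centerdot)}=\bot_{\mathfrak{E}_{\mathfrak{S}}}$, giving $\epsilon^{\mathfrak{S}}_{\bot_{\mathfrak{E}_{\mathfrak{S}}}}(\sigma)=\bot$; and~\eqref{etbar} is the three-case check ``$\sigma_1\sqsubseteq_{\mathfrak{S}}\sigma$'' / ``$\sigma_2\sqsubseteq_{\mathfrak{S}}\sigma$'' / neither against the table $\overline{\textit{\bf Y}}=\textit{\bf N},\ \overline{\textit{\bf N}}=\textit{\bf Y},\ \overline{\bot}=\bot$ of~\eqref{Binvol} (the $\centerdot$-cases identical, with one clause empty). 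For~\eqref{axiomsigmainfsemilattice}, writing $\mathfrak{l}=\mathfrak{l}_{(\sigma_0,\sigma'_0)}$, I would combine the universal property of the meet in $\mathfrak{S}$ --- $\sigma_0\sqsubseteq_{\mathfrak{S}}\bigsqcap^{\mathfrak{S}}_{\sigma\in S}\sigma$ iff $\sigma_0\sqsubseteq_{\mathfrak{S}}\sigma$ for all $\sigma\in S$, and likewise for $\sigma'_0$ --- with the shape of meets in $\mathfrak{B}$ forced by~\eqref{orderB}: for a nonempty family, $\bigwedge$ is $\textit{\bf Y}$ iff every term is $\textit{\bf Y}$, is $\textit{\bf N}$ iff every term is $\textit{\bf N}$, and is $\bot$ otherwise. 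Matching $\textit{\bf Y}$, then $\textit{\bf N}$, then $\bot$ by elimination gives the identity; the $\centerdot$-forms are the same with the relevant clause vacuous.

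The substantive step --- the one I expect to cost real work --- is~\eqref{axiomEinfsemilattice}, since it presupposes a description of arbitrary meets in $\mathfrak{E}_{\mathfrak{S}}$ while only the binary law~\eqref{defcapES} has been given. I would prove that $\mathfrak{l}_{(\sigma,\sigma')}\mapsto(\uparrow^{\mathfrak{S}}\!\sigma,\uparrow^{\mathfrak{S}}\!\sigma')$ (with $\centerdot\mapsto\varnothing$) is an order isomorphism, for componentwise inclusion, of $\mathfrak{E}_{\mathfrak{S}}$ onto the set of pairs $(A,A')$ of up-sets of $\mathfrak{S}$ with each of $A,A'$ empty or principal and $A\cap A'=\varnothing$ --- the order statement reducing to ``$\beta\sqsubseteq_{\mathfrak{S}}\alpha$ iff $\uparrow^{\mathfrak{S}}\!\alpha\subseteq\uparrow^{\mathfrak{S}}\!\beta$'' --- and that under this identification~\eqref{defcapES} is exactly componentwise intersection. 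The fact making this run is $\bigcap_{i}\uparrow^{\mathfrak{S}}\!\sigma_i=\uparrow^{\mathfrak{S}}\!\bigl(\bigsqcup^{\mathfrak{S}}_{i}\sigma_i\bigr)$ whenever the $\sigma_i$ admit a common upper bound --- this is where down-completeness of $\mathfrak{S}$ is used, the supremum being the meet of the nonempty set of upper bounds --- and $\bigcap_{i}\uparrow^{\mathfrak{S}}\!\sigma_i=\varnothing$ otherwise. Since the intersection of any nonempty family of pairs in the image is again such a pair (empty-or-principal in each slot, and disjoint because it lies componentwise below one of the $(\uparrow^{\mathfrak{S}}\!\sigma_i,\uparrow^{\mathfrak{S}}\!\sigma'_i)$), $\mathfrak{E}_{\mathfrak{S}}$ is a down-complete Inf semilattice and $\bigsqcap^{\mathfrak{E}_{\mathfrak{S}}}E$ is the effect whose fibres over $\textit{\bf Y}$ and $\textit{\bf N}$ are $\bigcap_i\uparrow^{\mathfrak{S}}\!\sigma_i$ and $\bigcap_i\uparrow^{\mathfrak{S}}\!\sigma'_i$. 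Then $\epsilon^{\mathfrak{S}}_{\bigsqcap^{\mathfrak{E}_{\mathfrak{S}}}E}(\sigma)=\textit{\bf Y}$ iff $\sigma\in\bigcap_i\uparrow^{\mathfrak{S}}\!\sigma_i$ iff $\epsilon^{\mathfrak{S}}_{\mathfrak{l}}(\sigma)=\textit{\bf Y}$ for every $\mathfrak{l}\in E$, and likewise with $\textit{\bf N}$, so the same $\mathfrak{B}$-meet bookkeeping as before closes it.

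The two separation statements are then short. In~\eqref{Chuextensional}, ``$\Leftarrow$'' is trivial and ``$\Rightarrow$'' is the converse reformulation of the first paragraph: $\epsilon^{\mathfrak{S}}_{\mathfrak{l}}$ fixes its fibres over $\textit{\bf Y}$ and $\textit{\bf N}$, hence the pair $(\uparrow^{\mathfrak{S}}\!\sigma,\uparrow^{\mathfrak{S}}\!\sigma')$, hence (via minima, or by reading off emptiness) the data $(\sigma,\sigma')\in(\mathfrak{S}\cup\{\centerdot\})^2$ defining $\mathfrak{l}$. In~\eqref{Chuseparated}, ``$\Leftarrow$'' is again trivial, and for ``$\Rightarrow$'' I would test against $\mathfrak{l}_{(\sigma,\centerdot)}\in\mathfrak{E}_{\mathfrak{S}}$: since $\epsilon^{\mathfrak{S}}_{\mathfrak{l}_{(\sigma,\centerdot)}}(\sigma)=\textit{\bf Y}$, the hypothesis gives $\epsilon^{\mathfrak{S}}_{\mathfrak{l}_{(\sigma,\centerdot)}}(\sigma')=\textit{\bf Y}$, i.e.\ $\sigma\sqsubseteq_{\mathfrak{S}}\sigma'$; testing with $\mathfrak{l}_{(\sigma',\centerdot)}$ gives $\sigma'\sqsubseteq_{\mathfrak{S}}\sigma$, so $\sigma=\sigma'$ by antisymmetry of $\sqsubseteq_{\mathfrak{S}}$. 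The one genuine obstacle is thus the structural description of meets in $\mathfrak{E}_{\mathfrak{S}}$ needed for~\eqref{axiomEinfsemilattice}; everything else is bookkeeping with~\eqref{defepsilonS} and~\eqref{orderB}.
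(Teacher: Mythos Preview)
The paper states this lemma without proof, treating every clause as a routine verification from Definitions~\ref{defgeneralizedeffectspace} and~\eqref{defepsilonS}. Your argument is correct and supplies exactly the details the paper omits; in particular your order-isomorphism of $\mathfrak{E}_{\mathfrak{S}}$ with disjoint pairs of principal-or-empty up-sets, and the identity $\bigcap_i\uparrow^{\mathfrak{S}}\!\sigma_i=\uparrow^{\mathfrak{S}}\!\bigl(\bigsqcup^{\mathfrak{S}}_i\sigma_i\bigr)$ (or $\varnothing$), is precisely what is needed to make sense of arbitrary meets in $\mathfrak{E}_{\mathfrak{S}}$ and hence of~\eqref{axiomEinfsemilattice}, which the paper leaves implicit in its extension of the binary law~\eqref{defcapES}.
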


$({ \mathfrak{S}},{ \mathfrak{E}}_{ \mathfrak{S}},\epsilon^{ \mathfrak{S}})$ will be designated as the generalized States/Effects Chu space associated to ${ \mathfrak{S}}$.\\

We also note the following useful property. 
\begin{lemma}\label{lemma2compatiblemeasurements}
If $(\sigma_1\sqcap_{{}_{{ \mathfrak{S}}}} \sigma_2)$ and $(\sigma'_1\sqcap_{{}_{{ \mathfrak{S}}}} \sigma'_2)$ admit no common upper-bound in ${ \mathfrak{S}}$, then ${ \mathfrak{l}}_{{}_{(\sigma_1,\sigma'_1)}}$ and ${ \mathfrak{l}}_{{}_{(\sigma_2,\sigma'_2)}}$ admit a common upper-bound in ${ \mathfrak{E}}_{ \mathfrak{S}}$ which is ${ \mathfrak{l}}_{{}_{(\sigma_1\sqcap_{{}_{{ \mathfrak{S}}}} \sigma_2,\sigma'_1\sqcap_{{}_{{ \mathfrak{S}}}} \sigma'_2)}}$.  As a consequence, the supremum $({ \mathfrak{l}}_{{}_{(\sigma_1,\sigma'_1)}} \sqcup_{{}_{{ \mathfrak{E}}_{ \mathfrak{S}}}} { \mathfrak{l}}_{{}_{(\sigma_2,\sigma'_2)}})$ exists in ${ \mathfrak{E}}_{ \mathfrak{S}}$ and is given by
\begin{eqnarray}
{ \mathfrak{l}}_{{}_{(\sigma_1,\sigma'_1)}} \sqcup_{{}_{{ \mathfrak{E}}_{ \mathfrak{S}}}} { \mathfrak{l}}_{{}_{(\sigma_2,\sigma'_2)}}  &=& { \mathfrak{l}}_{{}_{(\sigma_1\sqcap_{{}_{{ \mathfrak{S}}}} \sigma_2,\sigma'_1\sqcap_{{}_{{ \mathfrak{S}}}} \sigma'_2)}}.
\end{eqnarray}
This result can be extended to the whole set of effects as long as we adopt the following conventions  
\begin{eqnarray}
(\;\centerdot\;\sqcap_{{}_{{ \mathfrak{S}}}} \sigma) := \sigma  \;\;\;\;\;\;\; &\textit{\rm and}& \;\;\;\;\;\;\;  (\;\centerdot\;\sqcap_{{}_{{ \mathfrak{S}}}} \;\centerdot\;) := \;\centerdot\;
\end{eqnarray}
\end{lemma}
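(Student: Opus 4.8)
The plan is to argue entirely through the explicit meet law $(\ref{defcapES})$ on $\mathfrak{E}_{\mathfrak{S}}$, using that the partial order underlying this Inf semi-lattice is $\mathfrak{k}\sqsubseteq_{\mathfrak{E}_{\mathfrak{S}}}\mathfrak{m}\Leftrightarrow\mathfrak{k}\sqcap_{\mathfrak{E}_{\mathfrak{S}}}\mathfrak{m}=\mathfrak{k}$ (equivalently, by $(\ref{axiomEinfsemilattice})$ and $(\ref{Chuextensional})$, $\mathfrak{k}\sqsubseteq_{\mathfrak{E}_{\mathfrak{S}}}\mathfrak{m}\Leftrightarrow\forall\sigma\,\epsilon^{\mathfrak{S}}_{\mathfrak{k}}(\sigma)\leq\epsilon^{\mathfrak{S}}_{\mathfrak{m}}(\sigma)$). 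First I would record a preliminary fact. Since $\sigma_1\sqcap_{\mathfrak{S}}\sigma_2\sqsubseteq_{\mathfrak{S}}\sigma_i$ and $\sigma'_1\sqcap_{\mathfrak{S}}\sigma'_2\sqsubseteq_{\mathfrak{S}}\sigma'_j$ for all $i,j\in\{1,2\}$, every common upper bound of $\sigma_i$ and $\sigma'_j$ in $\mathfrak{S}$ is also a common upper bound of $\sigma_1\sqcap_{\mathfrak{S}}\sigma_2$ and $\sigma'_1\sqcap_{\mathfrak{S}}\sigma'_2$; the hypothesis therefore forces $\neg\widehat{\sigma_i\sigma'_j}{}^{\mathfrak{S}}$ for all $i,j$, so in particular $\mathfrak{l}_{(\sigma_1,\sigma'_1)}$, $\mathfrak{l}_{(\sigma_2,\sigma'_2)}$ and $\mathfrak{l}_{(\sigma_1\sqcap_{\mathfrak{S}}\sigma_2,\sigma'_1\sqcap_{\mathfrak{S}}\sigma'_2)}$ are all genuine elements of $\mathfrak{E}_{\mathfrak{S}}$ and the right-hand side of the claimed identity is meaningful.

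Second, I would check that $\mathfrak{l}_{(\sigma_1\sqcap_{\mathfrak{S}}\sigma_2,\sigma'_1\sqcap_{\mathfrak{S}}\sigma'_2)}$ is a common upper bound of the two effects. For $i\in\{1,2\}$, evaluate $\mathfrak{l}_{(\sigma_i,\sigma'_i)}\sqcap_{\mathfrak{E}_{\mathfrak{S}}}\mathfrak{l}_{(\sigma_1\sqcap_{\mathfrak{S}}\sigma_2,\sigma'_1\sqcap_{\mathfrak{S}}\sigma'_2)}$ by $(\ref{defcapES})$: the first components $\sigma_i$ and $\sigma_1\sqcap_{\mathfrak{S}}\sigma_2$ admit $\sigma_i$ as a common upper bound with $\sigma_i\sqcup_{\mathfrak{S}}(\sigma_1\sqcap_{\mathfrak{S}}\sigma_2)=\sigma_i$ (because $\sigma_1\sqcap_{\mathfrak{S}}\sigma_2\sqsubseteq_{\mathfrak{S}}\sigma_i$), and symmetrically the second components return $\sigma'_i$; so the first clause of $(\ref{defcapES})$ applies and the meet equals $\mathfrak{l}_{(\sigma_i,\sigma'_i)}$. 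Hence $\mathfrak{l}_{(\sigma_i,\sigma'_i)}\sqsubseteq_{\mathfrak{E}_{\mathfrak{S}}}\mathfrak{l}_{(\sigma_1\sqcap_{\mathfrak{S}}\sigma_2,\sigma'_1\sqcap_{\mathfrak{S}}\sigma'_2)}$ for $i=1,2$; in particular a common upper bound of $\mathfrak{l}_{(\sigma_1,\sigma'_1)}$ and $\mathfrak{l}_{(\sigma_2,\sigma'_2)}$ exists.

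Third, I would show this bound is the least one, which yields the stated formula for the supremum. Let $\mathfrak{m}\in\mathfrak{E}_{\mathfrak{S}}$ be any common upper bound, so $\mathfrak{l}_{(\sigma_i,\sigma'_i)}\sqcap_{\mathfrak{E}_{\mathfrak{S}}}\mathfrak{m}=\mathfrak{l}_{(\sigma_i,\sigma'_i)}$ for $i=1,2$. Comparing the left-hand side with $(\ref{defcapES})$ (or, equivalently, evaluating $\epsilon^{\mathfrak{S}}$ at $\sigma_i$ and at $\sigma'_i$ via $(\ref{defepsilonS})$ and using $(\ref{Chuextensional})$) one deduces that $\mathfrak{m}=\mathfrak{l}_{(\tau,\tau')}$ with $\tau,\tau'\in\mathfrak{S}$ and $\tau\sqsubseteq_{\mathfrak{S}}\sigma_i$, $\tau'\sqsubseteq_{\mathfrak{S}}\sigma'_i$; running this for $i=1$ and $i=2$ gives $\tau\sqsubseteq_{\mathfrak{S}}\sigma_1\sqcap_{\mathfrak{S}}\sigma_2$ and $\tau'\sqsubseteq_{\mathfrak{S}}\sigma'_1\sqcap_{\mathfrak{S}}\sigma'_2$. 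Feeding these back into $(\ref{defcapES})$ shows $\mathfrak{l}_{(\sigma_1\sqcap_{\mathfrak{S}}\sigma_2,\sigma'_1\sqcap_{\mathfrak{S}}\sigma'_2)}\sqcap_{\mathfrak{E}_{\mathfrak{S}}}\mathfrak{m}=\mathfrak{l}_{(\sigma_1\sqcap_{\mathfrak{S}}\sigma_2,\sigma'_1\sqcap_{\mathfrak{S}}\sigma'_2)}$, i.e. $\mathfrak{l}_{(\sigma_1\sqcap_{\mathfrak{S}}\sigma_2,\sigma'_1\sqcap_{\mathfrak{S}}\sigma'_2)}\sqsubseteq_{\mathfrak{E}_{\mathfrak{S}}}\mathfrak{m}$, so it is the supremum. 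Finally, using the conventions listed in the statement ($\centerdot\sqcap_{\mathfrak{S}}\sigma:=\sigma$ and $\centerdot\sqcap_{\mathfrak{S}}\centerdot:=\centerdot$) to interpret the right-hand side, together with the conventions adopted after $(\ref{defcapES})$ and $(\ref{defepsilonS})$ to interpret $\sqcap_{\mathfrak{E}_{\mathfrak{S}}}$, $\widehat{\;\cdot\;}{}^{\mathfrak{S}}$ and $\sqsubseteq_{\mathfrak{S}}$ on entries involving $\centerdot$, the very same computations establish the extension to all of $\mathfrak{E}_{\mathfrak{S}}$.

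The only mildly delicate point is bookkeeping in the least-upper-bound step: because $\mathfrak{l}_{(\sigma_i,\sigma'_i)}$ has both entries in $\mathfrak{S}$ (no $\centerdot$), the equality $\mathfrak{l}_{(\sigma_i,\sigma'_i)}\sqcap_{\mathfrak{E}_{\mathfrak{S}}}\mathfrak{m}=\mathfrak{l}_{(\sigma_i,\sigma'_i)}$ can only arise from the first clause of $(\ref{defcapES})$, which at one stroke excludes the degenerate shapes $\mathfrak{l}_{(\centerdot,\tau')}$, $\mathfrak{l}_{(\tau,\centerdot)}$, $\mathfrak{l}_{(\centerdot,\centerdot)}$ for $\mathfrak{m}$ and delivers the two relations $\tau\sqsubseteq_{\mathfrak{S}}\sigma_i$, $\tau'\sqsubseteq_{\mathfrak{S}}\sigma'_i$. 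Once this is observed, everything else is a single mechanical application of that same clause of $(\ref{defcapES})$ together with the order characterisation of $\sqsubseteq_{\mathfrak{E}_{\mathfrak{S}}}$.
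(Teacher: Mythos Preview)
Your argument is correct. The paper states this lemma without proof, treating it as an immediate consequence of the definitions; your verification via the explicit meet law $(\ref{defcapES})$ and the order characterisation $\mathfrak{k}\sqsubseteq_{\mathfrak{E}_{\mathfrak{S}}}\mathfrak{m}\Leftrightarrow\mathfrak{k}\sqcap_{\mathfrak{E}_{\mathfrak{S}}}\mathfrak{m}=\mathfrak{k}$ is exactly the natural way to unpack it, and the bookkeeping you flag in the final paragraph (that a common upper bound $\mathfrak{m}$ of an effect with both coordinates in $\mathfrak{S}$ cannot carry a $\centerdot$ in either slot) is the only point requiring care.
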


\subsection{Morphisms}\label{subsectionchannels}

We turn the collection of States/Effects Chu spaces into a category by defining the following morphisms.

\begin{definition}
We will consider the morphisms from a States/Effects Chu space $({ \mathfrak{S}}_{A},{ \mathfrak{E}}_{{ \mathfrak{S}}_A},\epsilon^{{ \mathfrak{S}}_A})$ to another States/Effects Chu space $({ \mathfrak{S}}_{B},{ \mathfrak{E}}_{{ \mathfrak{S}}_B},\epsilon^{{ \mathfrak{S}}_B})$, i.e.  pairs of maps $f : { \mathfrak{S}}_{A} \rightarrow { \mathfrak{S}}_{B}$ and $f^{\ast} : { \mathfrak{E}}_{{ \mathfrak{S}}_B}\rightarrow { \mathfrak{E}}_{{ \mathfrak{S}}_A}$ satisfying the following properties  
\begin{eqnarray} 
\forall \sigma_A\in { \mathfrak{S}}_{A}, \forall { \mathfrak{l}}_B \in { \mathfrak{E}}_{{ \mathfrak{S}}_B}&&\epsilon^{{ \mathfrak{S}}_B}_{{ \mathfrak{l}}_B}(f(\sigma_A))=\epsilon^{{ \mathfrak{S}}_A}_{f^{\ast}({ \mathfrak{l}}_B)}(\sigma_A).\label{defchumorphism}
\end{eqnarray}
\end{definition}

\begin{remark}
Note that, the eventual surjectivity of $f^{\ast}$ implies the injectivity of $f$.  This point uses the property (\ref{Chuseparated}).  Explicitly,
\begin{eqnarray}
\forall \sigma_A,\sigma'_A\in { \mathfrak{S}}_{A},\;\; f(\sigma_A)=f(\sigma'_A) &\Leftrightarrow & (\,\forall { \mathfrak{l}}_B\in { \mathfrak{E}}_{{ \mathfrak{S}}_B},\;\epsilon^{{ \mathfrak{S}}_B}_{{ \mathfrak{l}}_B}(f(\sigma_A))=\epsilon^{{ \mathfrak{S}}_B}_{{ \mathfrak{l}}_B}(f(\sigma'_A))\,)\nonumber\\
&\Leftrightarrow & (\,\forall { \mathfrak{l}}_B\in { \mathfrak{E}}_{{ \mathfrak{S}}_B},\;\epsilon^{{ \mathfrak{S}}_B}_{f^{\ast}({ \mathfrak{l}}_B)}(\sigma_A)=\epsilon^{{ \mathfrak{S}}_B}_{f^{\ast}({ \mathfrak{l}}_B)}(\sigma'_A)\,)\nonumber\\
&\Leftrightarrow & (\,\forall { \mathfrak{l}}'_B\in { \mathfrak{E}}_{{ \mathfrak{S}}_B},\;\epsilon^{{ \mathfrak{S}}_B}_{{ \mathfrak{l}}'_B}(\sigma_A)=\epsilon^{{ \mathfrak{S}}_B}_{{ \mathfrak{l}}'_B}(\sigma'_A)\,)\nonumber\\
&\Leftrightarrow & (\, \sigma_A=\sigma'_A\,).
\end{eqnarray}
In the same way,  using the properties (\ref{Chuextensional}) and the eventual surjectivity of $f$, we can deduce the injectivity of $f^{\ast}$. 
\end{remark}

The duality property (\ref{defchumorphism}) suffices to deduce the following properties.
\begin{theorem} The left-component $f$ of a Chu morphism from $({ \mathfrak{S}}_A,{ \mathfrak{E}}_{{ \mathfrak{S}}_A},\epsilon^{{ \mathfrak{S}}_A})$ to $({ \mathfrak{S}}_B,{ \mathfrak{E}}_{{ \mathfrak{S}}_B},\epsilon^{{ \mathfrak{S}}_B})$ satisfies
\begin{eqnarray}
\forall S\subseteq { \mathfrak{S}}_A,&& f(\bigsqcap{}^{{}^{{ \mathfrak{S}}_A}} \! S) = \bigsqcap{}^{{}^{{ \mathfrak{S}}_B}}_{{}_{\sigma\in S}} \; f(\sigma)\label{f12cap}
%\\
%\forall { \mathfrak{C}}\subseteq_{Chain} { \mathfrak{S}}_A,&& f(\bigsqcup{}^{{}^{{ \mathfrak{S}}_A}} \! { \mathfrak{C}}) = \bigsqcup{}^{{}^{{ \mathfrak{S}}_B}}_{{}_{\sigma\in { \mathfrak{C}}}} \; f(\sigma).\label{f12cupchain}
\end{eqnarray}
%As a consequence of (\ref{f12cupchain}), $f$ is in particular order-preserving \\
The right-component $f^{\ast}$ of a Chu morphism from $({ \mathfrak{S}}_A,{ \mathfrak{E}}_{{ \mathfrak{S}}_A},\epsilon^{{ \mathfrak{S}}_A})$ to $({ \mathfrak{S}}_B,{ \mathfrak{E}}_{{ \mathfrak{S}}_B},\epsilon^{{ \mathfrak{S}}_B})$ satisfies
\begin{eqnarray}
\forall E\subseteq { \mathfrak{E}}_{{ \mathfrak{S}}_B},&& f^{\ast}(\bigsqcap{}^{{}^{{ \mathfrak{E}}_{{ \mathfrak{S}}_B}}} \! E) = \bigsqcap{}^{{}^{{ \mathfrak{E}}_{{ \mathfrak{S}}_A}}}_{{}_{{ \mathfrak{l}}\in E}} \; f^{\ast}({ \mathfrak{l}})\label{f21cap}\\
%\forall C\subseteq_{Chain} { \mathfrak{E}}_B,&& f^{\ast}(\bigsqcup{}^{{}^{{ \mathfrak{E}}_B}} \! C) = \bigsqcup{}^{{}^{{ \mathfrak{E}}_A}}_{{}_{{ \mathfrak{l}}\in C}} \; f^{\ast}({ \mathfrak{l}})\label{f21cupchain}\\
\forall { \mathfrak{l}} \in { \mathfrak{E}}_{{ \mathfrak{S}}_B},&& f^{\ast}(\; \overline{\,{ \mathfrak{l}}\,} \;) = \overline{f^{\ast}({ \mathfrak{l}})}\label{f21bar}\\
&& f^{\ast}(\;{ \mathfrak{Y}}_{{ \mathfrak{E}}_{{ \mathfrak{S}}_B}}\;) = { \mathfrak{Y}}_{{ \mathfrak{E}}_{{ \mathfrak{S}}_A}}.\label{f21Y}%%%%
\end{eqnarray}
%In particular, $f^{\ast}$ is order-preserving.
\end{theorem}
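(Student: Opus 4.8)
The plan is to reduce each of the listed identities to the separation and extensionality properties (\ref{Chuseparated}) and (\ref{Chuextensional}) and then to establish it by unravelling through the Chu duality (\ref{defchumorphism}) together with the evaluation rules of the preceding lemma. Property (\ref{Chuseparated}), applied in $\mathfrak{S}_B$, says that an element of the space of states is determined by its evaluations against all effects, so to prove (\ref{f12cap}) it suffices to check that, for every $\mathfrak{l}_B\in\mathfrak{E}_{\mathfrak{S}_B}$, the two sides receive the same value under $\epsilon^{\mathfrak{S}_B}_{\mathfrak{l}_B}$; dually, (\ref{Chuextensional}), applied in $\mathfrak{E}_{\mathfrak{S}_A}$, says that an effect is determined by its evaluations against all states, so to prove (\ref{f21cap}), (\ref{f21bar}) and (\ref{f21Y}) it suffices to check equality of the values at every $\sigma_A\in\mathfrak{S}_A$. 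One should note at the outset that the infima on the right-hand sides exist: $\mathfrak{S}_B$ is a down-complete Inf semi-lattice by Definition \ref{definitiongeneralizedspaceofstates}, and the analogous completeness of $\mathfrak{E}_{\mathfrak{S}_A}$, $\mathfrak{E}_{\mathfrak{S}_B}$ is precisely what makes the symbols appearing in (\ref{axiomEinfsemilattice}) meaningful (for the degenerate empty family one restricts to nonempty $S$ and $E$, since $\mathfrak{B}$ has no top element).

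For (\ref{f12cap}), fix $\mathfrak{l}_B$. On the left-hand side, apply (\ref{defchumorphism}) to rewrite $\epsilon^{\mathfrak{S}_B}_{\mathfrak{l}_B}(f(\bigsqcap^{\mathfrak{S}_A} S))$ as $\epsilon^{\mathfrak{S}_A}_{f^\ast(\mathfrak{l}_B)}(\bigsqcap^{\mathfrak{S}_A} S)$, then (\ref{axiomsigmainfsemilattice}) to distribute the evaluation over the infimum, then (\ref{defchumorphism}) once more termwise; this produces $\bigwedge_{\sigma\in S}\epsilon^{\mathfrak{S}_B}_{\mathfrak{l}_B}(f(\sigma))$. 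Applying (\ref{axiomsigmainfsemilattice}) directly inside $\mathfrak{S}_B$ shows that $\epsilon^{\mathfrak{S}_B}_{\mathfrak{l}_B}(\bigsqcap^{\mathfrak{S}_B}_{\sigma\in S} f(\sigma))$ equals the same expression. Since $\mathfrak{l}_B$ was arbitrary, (\ref{Chuseparated}) gives the claimed equality.

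The three properties of $f^\ast$ are obtained by the dual computation: fix $\sigma_A\in\mathfrak{S}_A$ and compare evaluations at $\sigma_A$, concluding with (\ref{Chuextensional}). For (\ref{f21cap}), apply (\ref{defchumorphism}), then (\ref{axiomEinfsemilattice}) in $\mathfrak{E}_{\mathfrak{S}_B}$, then (\ref{defchumorphism}) termwise, and compare with (\ref{axiomEinfsemilattice}) applied in $\mathfrak{E}_{\mathfrak{S}_A}$; both sides evaluate to $\bigwedge_{\mathfrak{l}\in E}\epsilon^{\mathfrak{S}_A}_{f^\ast(\mathfrak{l})}(\sigma_A)$. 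For (\ref{f21bar}), apply (\ref{defchumorphism}), then the involution rule (\ref{etbar}), then (\ref{defchumorphism}), and compare with (\ref{etbar}) applied in $\mathfrak{S}_A$ to the effect $f^\ast(\mathfrak{l})$; both sides evaluate to $\overline{\epsilon^{\mathfrak{S}_A}_{f^\ast(\mathfrak{l})}(\sigma_A)}$. For (\ref{f21Y}), apply (\ref{defchumorphism}) and (\ref{ety}) to get $\epsilon^{\mathfrak{S}_A}_{f^\ast(\mathfrak{Y}_{\mathfrak{E}_{\mathfrak{S}_B}})}(\sigma_A)=\epsilon^{\mathfrak{S}_B}_{\mathfrak{Y}_{\mathfrak{E}_{\mathfrak{S}_B}}}(f(\sigma_A))=\textit{\bf Y}$, which by (\ref{ety}) is exactly $\epsilon^{\mathfrak{S}_A}_{\mathfrak{Y}_{\mathfrak{E}_{\mathfrak{S}_A}}}(\sigma_A)$.

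I do not expect any genuine obstacle: the whole argument amounts to conjugating the elementary evaluation identities of the preceding lemma by the Chu duality (\ref{defchumorphism}) and invoking separation/extensionality. The only points needing a little care are bookkeeping ones --- making sure the relevant infima exist and keeping straight which space each $\epsilon$ and each $\bigsqcap$ refers to --- and these are covered by the down-completeness built into the definition of a space of states and of its associated space of effects.
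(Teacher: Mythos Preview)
Your proposal is correct and follows essentially the same approach as the paper: conjugate the evaluation identities (\ref{axiomsigmainfsemilattice}), (\ref{axiomEinfsemilattice}), (\ref{etbar}), (\ref{ety}) through the Chu duality (\ref{defchumorphism}) and conclude via the separation/extensionality properties (\ref{Chuseparated}) and (\ref{Chuextensional}). The paper only spells out the computation for (\ref{f12cap}) and leaves the remaining three to the reader, whereas you describe each explicitly, but the strategy is identical.
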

\begin{proof}
All proofs follow the same trick based on the duality relation (\ref{defchumorphism}) and the separation property (\ref{Chuseparated}).  For example, for any $S\subseteq { \mathfrak{S}}_A$ and any ${ \mathfrak{l}} \in { \mathfrak{E}}_{{ \mathfrak{S}}_B}$, we have, using (\ref{defchumorphism}) and (\ref{axiomEinfsemilattice}) :
\begin{eqnarray}
\epsilon^{{ \mathfrak{S}}_B}_{{ \mathfrak{l}}}(f(\bigsqcap{}^{{}^{{ \mathfrak{S}}_A}} \!\! S)) &=& \epsilon^{{ \mathfrak{S}}_A}_{f^{\ast}({ \mathfrak{l}})}(\bigsqcap{}^{{}^{{ \mathfrak{S}}_A}} \!\! S)\nonumber\\
&=& \bigwedge{}_{\sigma\in S} \; \epsilon^{{ \mathfrak{S}}_A}_{f^{\ast}({ \mathfrak{l}})}(\sigma)\nonumber\\
&=&  \bigwedge{}_{\sigma\in S} \; \epsilon^{{ \mathfrak{S}}_B}_{{ \mathfrak{l}}}(f(\sigma))\nonumber\\
&=&  \epsilon^{{ \mathfrak{S}}_B}_{{ \mathfrak{l}}}(\bigsqcap{}^{{}^{{ \mathfrak{S}}_B}}_{\sigma\in S} f(\sigma))
\end{eqnarray}
We now use the property (\ref{Chuseparated}) to deduce (\ref{f12cap}).
\end{proof}

\begin{theorem}
The composition of a morphism $(f ,f^{\ast})$ from $({ \mathfrak{S}}_A,{ \mathfrak{E}}_{{ \mathfrak{S}}_A},\epsilon^{{ \mathfrak{S}}_A})$ to $({ \mathfrak{S}}_B,{ \mathfrak{E}}_{{ \mathfrak{S}}_B},\epsilon^{{ \mathfrak{S}}_B})$ by another morphism $(g ,g^{\ast})$ defined from $({ \mathfrak{S}}_B,{ \mathfrak{E}}_{{ \mathfrak{S}}_B},\epsilon^{{ \mathfrak{S}}_B})$ to $({ \mathfrak{S}}_C,{ \mathfrak{E}}_{{ \mathfrak{S}}_C},\epsilon^{{ \mathfrak{S}}_C})$ is given by 
$( g\circ f ,f^{\ast} \circ g^{\ast})$ defining a valid morphism from $({ \mathfrak{S}}_A,{ \mathfrak{E}}_{{ \mathfrak{S}}_A},\epsilon^{{ \mathfrak{S}}_A})$ to $({ \mathfrak{S}}_C,{ \mathfrak{E}}_{{ \mathfrak{S}}_C},\epsilon^{{ \mathfrak{S}}_C})$.
\end{theorem}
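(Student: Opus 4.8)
The plan is to verify directly that the pair $(g\circ f,\, f^{\ast}\circ g^{\ast})$ satisfies the single defining property (\ref{defchumorphism}) of a Chu morphism; once that is done, nothing else needs to be checked to conclude that it is a valid morphism, and the usual associativity and unitality of this composition law (with identity morphisms $(\mathrm{id}_{\mathfrak{S}},\mathrm{id}_{\mathfrak{E}_{\mathfrak{S}}})$) are then inherited from those of ordinary function composition, which is what turns the collection of States/Effects Chu spaces into a category.

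First I would record that the types match: since $f:\mathfrak{S}_A\to\mathfrak{S}_B$ and $g:\mathfrak{S}_B\to\mathfrak{S}_C$, the composite $g\circ f$ is a map $\mathfrak{S}_A\to\mathfrak{S}_C$; dually, since $g^{\ast}:\mathfrak{E}_{\mathfrak{S}_C}\to\mathfrak{E}_{\mathfrak{S}_B}$ and $f^{\ast}:\mathfrak{E}_{\mathfrak{S}_B}\to\mathfrak{E}_{\mathfrak{S}_A}$, the composite $f^{\ast}\circ g^{\ast}$ is a map $\mathfrak{E}_{\mathfrak{S}_C}\to\mathfrak{E}_{\mathfrak{S}_A}$. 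These are exactly the domains and codomains demanded of a Chu morphism from $(\mathfrak{S}_A,\mathfrak{E}_{\mathfrak{S}_A},\epsilon^{\mathfrak{S}_A})$ to $(\mathfrak{S}_C,\mathfrak{E}_{\mathfrak{S}_C},\epsilon^{\mathfrak{S}_C})$.

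The core step is a two-line diagram chase. Fix $\sigma_A\in\mathfrak{S}_A$ and $\mathfrak{l}_C\in\mathfrak{E}_{\mathfrak{S}_C}$. Applying the duality relation (\ref{defchumorphism}) for $(g,g^{\ast})$ to the state $f(\sigma_A)\in\mathfrak{S}_B$ and the effect $\mathfrak{l}_C$ gives $\epsilon^{\mathfrak{S}_C}_{\mathfrak{l}_C}(g(f(\sigma_A)))=\epsilon^{\mathfrak{S}_B}_{g^{\ast}(\mathfrak{l}_C)}(f(\sigma_A))$, and then applying (\ref{defchumorphism}) for $(f,f^{\ast})$ to the state $\sigma_A$ and the effect $g^{\ast}(\mathfrak{l}_C)\in\mathfrak{E}_{\mathfrak{S}_B}$ gives $\epsilon^{\mathfrak{S}_B}_{g^{\ast}(\mathfrak{l}_C)}(f(\sigma_A))=\epsilon^{\mathfrak{S}_A}_{f^{\ast}(g^{\ast}(\mathfrak{l}_C))}(\sigma_A)$. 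Chaining these two equalities yields $\epsilon^{\mathfrak{S}_C}_{\mathfrak{l}_C}((g\circ f)(\sigma_A))=\epsilon^{\mathfrak{S}_A}_{(f^{\ast}\circ g^{\ast})(\mathfrak{l}_C)}(\sigma_A)$, which is precisely property (\ref{defchumorphism}) for the candidate morphism. Since $\sigma_A$ and $\mathfrak{l}_C$ were arbitrary, this completes the argument. There is no genuine obstacle here: the proof is a direct substitution, and the only point requiring a moment's care is to feed each morphism's duality relation the correct intermediate datum on the $B$-side — the state $f(\sigma_A)$ when invoking $(g,g^{\ast})$, and the effect $g^{\ast}(\mathfrak{l}_C)$ when invoking $(f,f^{\ast})$ — so that the two equalities compose cleanly.
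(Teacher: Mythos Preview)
Your proof is correct and follows exactly the same approach as the paper: apply the duality relation (\ref{defchumorphism}) twice, first for $(g,g^{\ast})$ and then for $(f,f^{\ast})$, to obtain the chain $\epsilon^{\mathfrak{S}_C}_{\mathfrak{l}_C}(g\circ f(\sigma_A))=\epsilon^{\mathfrak{S}_B}_{g^{\ast}(\mathfrak{l}_C)}(f(\sigma_A))=\epsilon^{\mathfrak{S}_A}_{f^{\ast}\circ g^{\ast}(\mathfrak{l}_C)}(\sigma_A)$. The paper's proof is just this single line; your additional remarks on types and on the category structure are correct but not needed for the theorem as stated.
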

\begin{proof}Using two times the duality property, we obtain
\begin{eqnarray} 
\epsilon^{{ \mathfrak{S}}_C}_{{ \mathfrak{l}}_C}(g\circ f (\sigma_A))&=&
\epsilon^{{ \mathfrak{S}}_B}_{g^{\ast}({ \mathfrak{l}}_C)}(f (\sigma_A))=
\epsilon^{{ \mathfrak{S}}_A}_{f^{\ast} \circ g^{\ast}({ \mathfrak{l}}_C)}(\sigma_A).
\end{eqnarray}
\end{proof}

\begin{lemma}  \label{aepsilonsigma}
Let us consider a states/effects Chu space $({ \mathfrak{S}},{ \mathfrak{E}}_{{ \mathfrak{S}}},\epsilon^{{ \mathfrak{S}}})$. 
Let us consider a map $(A : { \mathfrak{S}}\longrightarrow { \mathfrak{B}},  \sigma \mapsto { \mathfrak{a}}_{\sigma})$ satisfying
\begin{eqnarray}
%\forall \sigma,\sigma'\in { \mathfrak{S}},&& (\sigma \sqsubseteq_{{}_{{ \mathfrak{S}}}} \sigma')\Rightarrow ({ \mathfrak{a}}_{\sigma} \leq { \mathfrak{a}}_{\sigma'}),\label{theorema1}\\
\forall \{\sigma_i\;\vert\; i\in I\}\subseteq { \mathfrak{S}},&&  { \mathfrak{a}}_{{\bigsqcap{}^{{}^{ \mathfrak{S}}}_{{}_{i\in i}}\sigma_i}} = \bigwedge{}_{{i\in I}} \;{ \mathfrak{a}}_{{\sigma_i}},\label{theorema2}
\end{eqnarray}
Then, we have
\begin{eqnarray}
\existunique \;{ \mathfrak{l}}\in { \mathfrak{E}}_{ \mathfrak{S}} & \vert & \forall \sigma\in { \mathfrak{S}}, \; { \epsilon}^{ \mathfrak{S}}_{{ \mathfrak{l}}}(\sigma)={ \mathfrak{a}}_{\sigma}.
\end{eqnarray}
\end{lemma}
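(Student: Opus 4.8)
The plan is to construct the effect $\mathfrak{l}$ explicitly from the data of the map $A$, verify it lies in $\mathfrak{E}_{\mathfrak{S}}$, check that its evaluation map reproduces $\mathfrak{a}_\sigma$, and finally invoke the extensionality property (\ref{Chuextensional}) for uniqueness. The existence part is where the real work lies, and the shape of the construction is dictated by the definition of $\mathfrak{E}_{\mathfrak{S}}$: every effect is $\mathfrak{l}_{(\tau,\tau')}$ for a pair of states $\tau,\tau'$ (or one/both replaced by $\centerdot$), so I must read off $\tau$ and $\tau'$ from $A$. The natural candidates are $\tau := \bigsqcap^{\mathfrak{S}}\{\sigma \mid \mathfrak{a}_\sigma = \textbf{Y}\}$ and $\tau' := \bigsqcap^{\mathfrak{S}}\{\sigma \mid \mathfrak{a}_\sigma = \textbf{N}\}$, with the convention that an empty infimum is replaced by the symbol $\centerdot$ (matching the conventions $(\sigma \sqsubseteq_{\mathfrak{S}} \centerdot) := \textrm{TRUE}$ and $\neg\widehat{\,\centerdot\,\sigma\,}^{\mathfrak{S}} := \textrm{TRUE}$ already in place).

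First I would set $Y_A := \{\sigma \in \mathfrak{S} \mid \mathfrak{a}_\sigma = \textbf{Y}\}$ and $N_A := \{\sigma \in \mathfrak{S} \mid \mathfrak{a}_\sigma = \textbf{N}\}$. Since $\mathfrak{S}$ is a down-complete Inf semi-lattice with bottom element $\bot_{\mathfrak{S}}$, the infima $\tau := \bigsqcap^{\mathfrak{S}} Y_A$ and $\tau' := \bigsqcap^{\mathfrak{S}} N_A$ exist whenever the respective sets are nonempty (if $Y_A = \emptyset$ I put $\tau := \centerdot$, similarly for $\tau'$). The key structural observation, coming from the hypothesis (\ref{theorema2}): applying $A$ to $\tau$ gives $\mathfrak{a}_\tau = \bigwedge_{\sigma \in Y_A} \mathfrak{a}_\sigma = \textbf{Y}$ (an infimum of copies of $\textbf{Y}$ in $\mathfrak{B}$, using the meet law on $\mathfrak{B}$), so in particular $\tau \in Y_A$ and $\tau$ is the least element on which $A$ takes value $\textbf{Y}$; likewise $\mathfrak{a}_{\tau'} = \textbf{N}$ and $\tau'$ is the least element on which $A$ takes value $\textbf{N}$. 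Next I must check $\tau$ and $\tau'$ admit no common upper bound, so that $\mathfrak{l}_{(\tau,\tau')}$ is a legitimate effect: if $\rho$ were an upper bound, then $\rho \sqsupseteq_{\mathfrak{S}} \tau \sqcap_{\mathfrak{S}} \tau'$, hence by (\ref{theorema2}) again $\mathfrak{a}_\rho \geq \mathfrak{a}_{\tau \sqcap_{\mathfrak{S}} \tau'} = \mathfrak{a}_\tau \wedge \mathfrak{a}_{\tau'} = \textbf{Y} \wedge \textbf{N} = \bot$; but also monotonicity of $A$ (a consequence of (\ref{theorema2}): $\rho \sqsupseteq \tau$ forces $\mathfrak{a}_\rho = \mathfrak{a}_\rho \wedge \mathfrak{a}_\tau$, wait — rather $\tau = \tau \sqcap \rho$ so $\mathfrak{a}_\tau = \mathfrak{a}_\tau \wedge \mathfrak{a}_\rho$, giving $\mathfrak{a}_\rho \in \{\textbf{Y},\bot\}$ when $\mathfrak{a}_\tau = \textbf{Y}$, and similarly $\mathfrak{a}_\rho \in \{\textbf{N},\bot\}$) forces $\mathfrak{a}_\rho \in \{\textbf{Y},\bot\} \cap \{\textbf{N},\bot\} = \{\bot\}$, which is consistent — so actually no contradiction arises here directly; the genuine obstruction is that there need be no such $\rho$ at all, and one must handle the case $\widehat{\tau\tau'}^{\mathfrak{S}}$ separately.

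This is the main obstacle: if $\tau$ and $\tau'$ do have a common upper bound $\rho$, then $\mathfrak{l}_{(\tau,\tau')}$ is not in $\mathfrak{E}_{\mathfrak{S}}$ and the naive construction fails. I would resolve it by showing this case is vacuous under (\ref{theorema2}), or rather that it forces one of $Y_A$, $N_A$ to be empty: indeed from the computation above a common upper bound $\rho$ of $\tau$ and $\tau'$ satisfies $\mathfrak{a}_\rho \geq \mathfrak{a}_{\tau \sqcap_{\mathfrak{S}} \tau'} = \bot$ trivially, but the correct route is to note that the least $\textbf{Y}$-state $\tau$ and the least $\textbf{N}$-state $\tau'$ satisfy: for any $\sigma$, $\mathfrak{a}_\sigma = \textbf{Y} \Leftrightarrow \sigma \sqsupseteq_{\mathfrak{S}} \tau$ and $\mathfrak{a}_\sigma = \textbf{N} \Leftrightarrow \sigma \sqsupseteq_{\mathfrak{S}} \tau'$ (the forward directions are membership in $Y_A$/$N_A$ and the reverse directions use monotonicity plus $\mathfrak{a}_\tau = \textbf{Y}$, $\mathfrak{a}_{\tau'} = \textbf{N}$: if $\sigma \sqsupseteq \tau$ then $\mathfrak{a}_\sigma \sqsupseteq \mathfrak{a}_\tau = \textbf{Y}$ in the order on $\mathfrak{B}$, forcing $\mathfrak{a}_\sigma = \textbf{Y}$). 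Hence if $\rho \sqsupseteq \tau$ and $\rho \sqsupseteq \tau'$ simultaneously, then $\mathfrak{a}_\rho = \textbf{Y}$ and $\mathfrak{a}_\rho = \textbf{N}$, absurd — so $\widehat{\tau\tau'}^{\mathfrak{S}}$ indeed fails and $\mathfrak{l}_{(\tau,\tau')} \in \mathfrak{E}_{\mathfrak{S}}$ (with the stated conventions covering the $\centerdot$ cases when $Y_A$ or $N_A$ is empty).

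Finally, with $\mathfrak{l} := \mathfrak{l}_{(\tau,\tau')}$ in hand, I verify $\epsilon^{\mathfrak{S}}_{\mathfrak{l}}(\sigma) = \mathfrak{a}_\sigma$ for all $\sigma$ by unwinding Definition (\ref{defepsilonS}): $\epsilon^{\mathfrak{S}}_{\mathfrak{l}}(\sigma) = \textbf{Y}$ iff $\tau \sqsubseteq_{\mathfrak{S}} \sigma$ iff $\mathfrak{a}_\sigma = \textbf{Y}$ by the characterization above, $\epsilon^{\mathfrak{S}}_{\mathfrak{l}}(\sigma) = \textbf{N}$ iff $\tau' \sqsubseteq_{\mathfrak{S}} \sigma$ iff $\mathfrak{a}_\sigma = \textbf{N}$, and otherwise $\epsilon^{\mathfrak{S}}_{\mathfrak{l}}(\sigma) = \bot = \mathfrak{a}_\sigma$; the two $\textbf{Y}/\textbf{N}$ cases cannot overlap precisely because $\neg\widehat{\tau\tau'}^{\mathfrak{S}}$. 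For uniqueness, if $\mathfrak{l}''$ is another effect with $\epsilon^{\mathfrak{S}}_{\mathfrak{l}''}(\sigma) = \mathfrak{a}_\sigma = \epsilon^{\mathfrak{S}}_{\mathfrak{l}}(\sigma)$ for all $\sigma$, then $\mathfrak{l}'' = \mathfrak{l}$ is immediate from the extensionality property (\ref{Chuextensional}) of the lemma on page above. This completes the proof.
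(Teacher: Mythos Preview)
Your proof is correct and follows exactly the same construction as the paper: define $\tau=\bigsqcap^{\mathfrak{S}}\{\sigma\mid\mathfrak{a}_\sigma=\textbf{Y}\}$, $\tau'=\bigsqcap^{\mathfrak{S}}\{\sigma\mid\mathfrak{a}_\sigma=\textbf{N}\}$ (with the $\centerdot$ convention for empty sets), set $\mathfrak{l}:=\mathfrak{l}_{(\tau,\tau')}$, and invoke (\ref{Chuextensional}) for uniqueness. The paper's proof consists of the single word ``Straightforward'' followed by this construction, whereas you have actually carried out the verifications (that $\mathfrak{a}_\tau=\textbf{Y}$, that $\neg\widehat{\tau\tau'}^{\mathfrak{S}}$, and that $\epsilon^{\mathfrak{S}}_{\mathfrak{l}}=\mathfrak{a}$) which the paper leaves implicit.
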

\begin{proof}
Straightforward. If $\{\, \sigma\;\vert\; { \mathfrak{a}}_{\sigma}=\textit{\bf Y}\,\}$ and $\{\, \sigma\;\vert\; { \mathfrak{a}}_{\sigma}=\textit{\bf N}\,\}$ are not empty, it suffices to define $\Sigma_A:=\bigsqcap^{{}^{{ \mathfrak{S}}}}\{\, \sigma\;\vert\; { \mathfrak{a}}_{\sigma}=\textit{\bf Y}\,\}$,  $\Sigma_A':=\bigsqcap^{{}^{{ \mathfrak{S}}}}\{\, \sigma\;\vert\; { \mathfrak{a}}_{\sigma}=\textit{\bf N}\,\}$ and ${ \mathfrak{l}}:={ \mathfrak{l}}_{(\Sigma_A,\Sigma_A')} $ (the case where some or all of these subsets are empty is treated immediately).\\
The uniqueness of ${ \mathfrak{l}}$ is guarantied by the property (\ref{Chuextensional}).
\end{proof}

\begin{theorem}
For any map $f:{ \mathfrak{S}}_A \longrightarrow { \mathfrak{S}}_B$ satisfying 
\begin{eqnarray}
\forall \{\sigma_i\;\vert\; i\in I\}\subseteq { \mathfrak{S}}_A, \; f({\bigsqcap{}^{{}^{ \mathfrak{S}_A}}_{{}_{i\in i}}\sigma_i}) = \bigsqcap{}^{{}^{{ \mathfrak{S}}_B}}_{{i\in I}} \;f(\sigma_i), 
\end{eqnarray}
there exists a unique map $f^\ast: { \mathfrak{E}}_{ \mathfrak{S}_B} \longrightarrow { \mathfrak{E}}_{ \mathfrak{S}_A}$ such that $(f,f^{\ast})$ is a morphism from the States/Effects Chu space $({ \mathfrak{S}}_{A},{ \mathfrak{E}}_{{ \mathfrak{S}}_A},\epsilon^{{ \mathfrak{S}}_A})$ to another States/Effects Chu space $({ \mathfrak{S}}_{B},{ \mathfrak{E}}_{{ \mathfrak{S}}_B},\epsilon^{{ \mathfrak{S}}_B})$.
\end{theorem}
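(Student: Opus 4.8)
The plan is to construct $f^\ast$ pointwise, using Lemma \ref{aepsilonsigma} to produce, for each $\mathfrak{l}_B\in\mathfrak{E}_{\mathfrak{S}_B}$, the unique effect of $\mathfrak{E}_{\mathfrak{S}_A}$ whose evaluation is forced by the duality relation (\ref{defchumorphism}). Concretely, fix $\mathfrak{l}_B\in\mathfrak{E}_{\mathfrak{S}_B}$ and define the map $A_{\mathfrak{l}_B}:\mathfrak{S}_A\to\mathfrak{B}$ by $\sigma_A\mapsto\epsilon^{\mathfrak{S}_B}_{\mathfrak{l}_B}(f(\sigma_A))$. To apply Lemma \ref{aepsilonsigma} I must check hypothesis (\ref{theorema2}): for any family $\{\sigma_i\mid i\in I\}\subseteq\mathfrak{S}_A$,
\begin{eqnarray}
A_{\mathfrak{l}_B}\Big(\bigsqcap{}^{{}^{\mathfrak{S}_A}}_{i\in I}\sigma_i\Big)=\epsilon^{\mathfrak{S}_B}_{\mathfrak{l}_B}\Big(f\big(\bigsqcap{}^{{}^{\mathfrak{S}_A}}_{i\in I}\sigma_i\big)\Big)=\epsilon^{\mathfrak{S}_B}_{\mathfrak{l}_B}\Big(\bigsqcap{}^{{}^{\mathfrak{S}_B}}_{i\in I}f(\sigma_i)\Big)=\bigwedge{}_{i\in I}\epsilon^{\mathfrak{S}_B}_{\mathfrak{l}_B}(f(\sigma_i))=\bigwedge{}_{i\in I}A_{\mathfrak{l}_B}(\sigma_i),
\end{eqnarray}
where the second equality is the hypothesis on $f$ and the third is property (\ref{axiomsigmainfsemilattice}) for $\epsilon^{\mathfrak{S}_B}$. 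Lemma \ref{aepsilonsigma} then yields a unique $\mathfrak{l}_A\in\mathfrak{E}_{\mathfrak{S}_A}$ with $\epsilon^{\mathfrak{S}_A}_{\mathfrak{l}_A}(\sigma_A)=A_{\mathfrak{l}_B}(\sigma_A)$ for all $\sigma_A$; I set $f^\ast(\mathfrak{l}_B):=\mathfrak{l}_A$.

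By construction, $\epsilon^{\mathfrak{S}_A}_{f^\ast(\mathfrak{l}_B)}(\sigma_A)=\epsilon^{\mathfrak{S}_B}_{\mathfrak{l}_B}(f(\sigma_A))$ for all $\sigma_A\in\mathfrak{S}_A$ and all $\mathfrak{l}_B\in\mathfrak{E}_{\mathfrak{S}_B}$, which is exactly (\ref{defchumorphism}), so $(f,f^\ast)$ is a morphism. For uniqueness, suppose $(f,g^\ast)$ is also a morphism; then for every $\mathfrak{l}_B$ and every $\sigma_A$ we have $\epsilon^{\mathfrak{S}_A}_{g^\ast(\mathfrak{l}_B)}(\sigma_A)=\epsilon^{\mathfrak{S}_B}_{\mathfrak{l}_B}(f(\sigma_A))=\epsilon^{\mathfrak{S}_A}_{f^\ast(\mathfrak{l}_B)}(\sigma_A)$, so by the extensionality property (\ref{Chuextensional}) we conclude $g^\ast(\mathfrak{l}_B)=f^\ast(\mathfrak{l}_B)$; as $\mathfrak{l}_B$ was arbitrary, $g^\ast=f^\ast$.

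There is essentially no serious obstacle here: the statement is a routine consequence of Lemma \ref{aepsilonsigma} together with the infima-preservation axiom (\ref{axiomsigmainfsemilattice}) and the extensionality/separation properties (\ref{Chuextensional})–(\ref{Chuseparated}) of the evaluation map. The only point requiring a modicum of care is the verification of hypothesis (\ref{theorema2}) for the auxiliary map $A_{\mathfrak{l}_B}$, which is precisely where the hypothesis on $f$ and property (\ref{axiomsigmainfsemilattice}) are consumed; once that is in place, both existence and uniqueness fall out of Lemma \ref{aepsilonsigma} and (\ref{Chuextensional}).
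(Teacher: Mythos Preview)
Your proof is correct and follows exactly the approach of the paper, which simply states that the result is a direct consequence of Lemma \ref{aepsilonsigma} and defines $f^\ast$ by associating to each $\mathfrak{l}\in\mathfrak{E}_{\mathfrak{S}_B}$ the unique $\mathfrak{l}'\in\mathfrak{E}_{\mathfrak{S}_A}$ satisfying $\epsilon^{\mathfrak{S}_A}_{\mathfrak{l}'}(\sigma)=\epsilon^{\mathfrak{S}_B}_{\mathfrak{l}}(f(\sigma))$ for all $\sigma$. You have spelled out the verification of hypothesis (\ref{theorema2}) and the uniqueness argument that the paper leaves implicit, but the route is the same.
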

\begin{proof}
Direct consequence of Lemma \ref{aepsilonsigma}. It suffices to define $f^{\ast}$ as the map which associates to any ${ \mathfrak{l}}\in { \mathfrak{E}}_{ \mathfrak{S}_B}$ the unique element ${ \mathfrak{l}}'\in { \mathfrak{E}}_{ \mathfrak{S}_A}$ such that $\forall \sigma\in { \mathfrak{S}}, \; { \epsilon}^{ \mathfrak{S}_A}_{{ \mathfrak{l}'}}(\sigma)={ \epsilon}^{ \mathfrak{S}_B}_{{ \mathfrak{l}}}(f(\sigma))$.
\end{proof}

\begin{theorem}  \label{blepsilonsigma}
Let us consider a states/effects Chu space $({ \mathfrak{S}},{ \mathfrak{E}}_{{ \mathfrak{S}}},\epsilon^{{ \mathfrak{S}}})$. 
Let us consider a map $(B : { \mathfrak{E}}_{{ \mathfrak{S}}}\longrightarrow { \mathfrak{B}},{ \mathfrak{l}}\mapsto { \mathfrak{b}}_{{ \mathfrak{l}}})$ satisfying
\begin{eqnarray}
\forall \{{ \mathfrak{l}}_i\;\vert\; i\in I\}\subseteq { \mathfrak{E}}_{{ \mathfrak{S}}},&&  { \mathfrak{b}}_{{\bigsqcap{}^{{}^{{ \mathfrak{E}}_{{ \mathfrak{S}}}}}_{{}_{i\in i}}{ \mathfrak{l}}_i}} = \bigwedge{}_{{i\in I}} \;{ \mathfrak{b}}_{{{ \mathfrak{l}}_i}},\label{theorembl2}\\
\forall { \mathfrak{l}}\in { \mathfrak{E}}_{{ \mathfrak{S}}},&& { \mathfrak{b}}_{\overline{\, \mathfrak{l}\,}}=\overline{{ \mathfrak{b}}_{{ \mathfrak{l}}}},\label{theorembl3}\\
& & { \mathfrak{b}}_{{ \mathfrak{Y}}_{{ \mathfrak{E}}_{{ \mathfrak{S}}}}}=\textit{\bf Y}.\label{theorembl4}
\end{eqnarray}
Then, we have
\begin{eqnarray}
\existunique \; \sigma\in { \mathfrak{S}} & \vert & \forall { \mathfrak{l}}\in { \mathfrak{E}}_{{ \mathfrak{S}}}, \; { \epsilon}^{ \mathfrak{S}}_{{ \mathfrak{l}}}(\sigma)={ \mathfrak{b}}_{{ \mathfrak{l}}}.
\end{eqnarray}
\end{theorem}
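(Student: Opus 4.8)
The plan is to construct the state $\sigma$ as the join of everything that the data $(\mathfrak{b}_{\mathfrak{l}})_{\mathfrak{l}\in\mathfrak{E}_{\mathfrak{S}}}$ forces to lie below it; this is the ``states side'' analogue of the construction used for Lemma \ref{aepsilonsigma}. Uniqueness is the trivial direction: two states inducing the same evaluation on all effects coincide by the separation property (\ref{Chuseparated}), so only existence needs work.

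I would begin with a few preliminary observations. Applying (\ref{theorembl2}) to two-element families shows $\mathfrak{l}\mapsto\mathfrak{b}_{\mathfrak{l}}$ is order-preserving on $\mathfrak{E}_{\mathfrak{S}}$; since $\overline{\bot_{{}_{\mathfrak{E}_{\mathfrak{S}}}}}=\bot_{{}_{\mathfrak{E}_{\mathfrak{S}}}}$, property (\ref{theorembl3}) then forces $\mathfrak{b}_{\bot_{{}_{\mathfrak{E}_{\mathfrak{S}}}}}=\bot$, the only element of $\mathfrak{B}$ fixed by the involution. A direct computation with (\ref{defcapES}) gives $\mathfrak{l}_{{}_{(\tau,\centerdot)}}\sqcap_{{}_{\mathfrak{E}_{\mathfrak{S}}}}\mathfrak{Y}_{{}_{\mathfrak{E}_{\mathfrak{S}}}}=\mathfrak{l}_{{}_{(\tau,\centerdot)}}$, hence $\mathfrak{l}_{{}_{(\tau,\centerdot)}}\sqsubseteq_{{}_{\mathfrak{E}_{\mathfrak{S}}}}\mathfrak{Y}_{{}_{\mathfrak{E}_{\mathfrak{S}}}}$; with (\ref{theorembl4}) this yields $\mathfrak{b}_{\mathfrak{l}_{{}_{(\tau,\centerdot)}}}\in\{\bot,\textit{\bf Y}\}$ for every $\tau$, while $\tau\mapsto\mathfrak{l}_{{}_{(\tau,\centerdot)}}$ is order-reversing. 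I then set $Y_{+}:=\{\tau\in\mathfrak{S}\mid\mathfrak{b}_{\mathfrak{l}_{{}_{(\tau,\centerdot)}}}=\textit{\bf Y}\}$. This set is nonempty ($\bot_{{}_{\mathfrak{S}}}\in Y_{+}$, since $\mathfrak{l}_{{}_{(\bot_{{}_{\mathfrak{S}}},\centerdot)}}=\mathfrak{Y}_{{}_{\mathfrak{E}_{\mathfrak{S}}}}$) and is a down-set (order-reversal of $\tau\mapsto\mathfrak{l}_{{}_{(\tau,\centerdot)}}$, monotonicity of $\mathfrak{b}$, and maximality of $\textit{\bf Y}$ in $\mathfrak{B}$).

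The main step — which I expect to be the only real obstacle — is to show $Y_{+}$ is bounded above in $\mathfrak{S}$. Using (\ref{axiomEinfsemilattice}) together with (\ref{Chuextensional}) one identifies the infimum $\bigsqcap{}^{{}^{\mathfrak{E}_{\mathfrak{S}}}}_{\tau\in Y_{+}}\mathfrak{l}_{{}_{(\tau,\centerdot)}}$ as $\mathfrak{l}_{{}_{(\bigsqcup{}^{{}^{\mathfrak{S}}}Y_{+},\,\centerdot)}}$ when $Y_{+}$ has an upper bound, and as $\bot_{{}_{\mathfrak{E}_{\mathfrak{S}}}}$ when it does not. Were $Y_{+}$ unbounded, (\ref{theorembl2}) would give $\mathfrak{b}_{\bot_{{}_{\mathfrak{E}_{\mathfrak{S}}}}}=\bigwedge_{\tau\in Y_{+}}\mathfrak{b}_{\mathfrak{l}_{{}_{(\tau,\centerdot)}}}=\textit{\bf Y}$ (the family being nonempty), contradicting $\mathfrak{b}_{\bot_{{}_{\mathfrak{E}_{\mathfrak{S}}}}}=\bot$. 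So $Y_{+}$ is bounded; then $\sigma:=\bigsqcup{}^{{}^{\mathfrak{S}}}Y_{+}$ exists (as the infimum of the nonempty set of upper bounds of $Y_{+}$, $\mathfrak{S}$ being a down-complete Inf semi-lattice), and applying (\ref{theorembl2}) once more to the now-bounded family gives $\mathfrak{b}_{\mathfrak{l}_{{}_{(\sigma,\centerdot)}}}=\textit{\bf Y}$, i.e.\ $\sigma\in Y_{+}$. Consequently $Y_{+}=\downarrow_{{}_{\mathfrak{S}}}\!\!\sigma$.

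Finally I would verify $\epsilon^{\mathfrak{S}}_{\mathfrak{l}}(\sigma)=\mathfrak{b}_{\mathfrak{l}}$ by running through the four shapes of effect in Definition \ref{defgeneralizedeffectspace}. For $\mathfrak{l}_{{}_{(\tau,\centerdot)}}$ this is immediate from $Y_{+}=\downarrow_{{}_{\mathfrak{S}}}\!\!\sigma$, (\ref{defepsilonS}), and $\mathfrak{b}_{\mathfrak{l}_{{}_{(\tau,\centerdot)}}}\in\{\bot,\textit{\bf Y}\}$; for $\mathfrak{l}_{{}_{(\centerdot,\tau')}}=\overline{\mathfrak{l}_{{}_{(\tau',\centerdot)}}}$ it follows by applying the involution of $\mathfrak{B}$ and using (\ref{theorembl3}) and (\ref{etbar}); and $\bot_{{}_{\mathfrak{E}_{\mathfrak{S}}}}$ is the identity $\bot=\bot$ already established. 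For $\mathfrak{l}_{{}_{(\tau,\tau')}}$ with $\neg\widehat{\tau\tau'}{}^{{}^{\mathfrak{S}}}$, the identity $\mathfrak{l}_{{}_{(\tau,\tau')}}\sqcap_{{}_{\mathfrak{E}_{\mathfrak{S}}}}\mathfrak{Y}_{{}_{\mathfrak{E}_{\mathfrak{S}}}}=\mathfrak{l}_{{}_{(\tau,\centerdot)}}$ (again from (\ref{defcapES})) gives $\mathfrak{b}_{\mathfrak{l}_{{}_{(\tau,\centerdot)}}}=\mathfrak{b}_{\mathfrak{l}_{{}_{(\tau,\tau')}}}\wedge\textit{\bf Y}$, and conjugating via (\ref{theorembl3}), $\mathfrak{b}_{\mathfrak{l}_{{}_{(\tau',\centerdot)}}}=\overline{\mathfrak{b}_{\mathfrak{l}_{{}_{(\tau,\tau')}}}}\wedge\textit{\bf Y}$; a three-way case split on the value of $\mathfrak{b}_{\mathfrak{l}_{{}_{(\tau,\tau')}}}$ then reproduces exactly the corresponding clause of (\ref{defepsilonS}) (using that $\tau$ and $\tau'$ cannot both be $\sqsubseteq_{{}_{\mathfrak{S}}}\sigma$, their common upper bound $\sigma$ being incompatible with $\neg\widehat{\tau\tau'}{}^{{}^{\mathfrak{S}}}$). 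Apart from the boundedness argument, everything here is routine manipulation of (\ref{defcapES}) and the arithmetic of $\mathfrak{B}$.
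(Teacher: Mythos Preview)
Your proof is correct. It is a dual reorganization of the paper's argument: the paper works on the effect side, forming $\mathfrak{l}_B:=\bigsqcap{}^{{}^{\mathfrak{E}_{\mathfrak{S}}}}\{\mathfrak{l}\mid\mathfrak{b}_{\mathfrak{l}}=\textit{\bf Y}\}$ and then arguing that $\mathfrak{l}_B$ must have the special shape $\mathfrak{l}_{{}_{(\Sigma,\centerdot)}}$ (because otherwise meeting with $\mathfrak{Y}_{{}_{\mathfrak{E}_{\mathfrak{S}}}}$ would produce something strictly smaller still with $\mathfrak{b}$-value $\textit{\bf Y}$); you instead work on the state side, forming $Y_+$ and taking its supremum. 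The step you flag as the only obstacle --- boundedness of $Y_+$ --- is precisely the dual of the paper's step that $\mathfrak{l}_B$ has first coordinate in $\mathfrak{S}$ rather than $\centerdot$. Your final verification by case analysis on the four effect shapes is more explicit than the paper's, which instead argues uniformly via $(\epsilon^{\mathfrak{S}}_{\mathfrak{l}}(\Sigma)=\textit{\bf Y})\Leftrightarrow(\mathfrak{l}\sqsupseteq_{{}_{\mathfrak{E}_{\mathfrak{S}}}}\mathfrak{l}_B)\Leftrightarrow(\mathfrak{b}_{\mathfrak{l}}=\textit{\bf Y})$ and then transports to the $\textit{\bf N}$ case by the involution, just as you do. The two routes produce the same state and are of comparable length; the paper's is slightly slicker in the endgame, yours is slightly more transparent about why the candidate state exists.
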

\begin{proof}
Let us consider 
\begin{eqnarray}
{\mathfrak{l}}_B &:=&\bigsqcap{}^{{}^{{{ \mathfrak{E}}_{{ \mathfrak{S}}}}}}\{\, { \mathfrak{l}}\in { \mathfrak{E}}_{{ \mathfrak{S}}} \;\vert\; { \mathfrak{b}}_{{ \mathfrak{l}}}=\textit{\bf Y}\,\}
\end{eqnarray}  
Note that ${ \mathfrak{l}}_B$ exists in ${ \mathfrak{E}}_{{ \mathfrak{S}}}$ because ${ \mathfrak{E}}_{{ \mathfrak{S}}}$ is a down-complete Inf semi-lattice and the subset $\{\, { \mathfrak{l}}\in { \mathfrak{E}}_{{ \mathfrak{S}}} \;\vert\; { \mathfrak{b}}_{{ \mathfrak{l}}}=\textit{\bf Y}\,\}$ contains at least the element ${ \mathfrak{Y}}_{{ \mathfrak{E}}_{{ \mathfrak{S}}}}$.  Moreover, ${ \mathfrak{b}}_{{ \mathfrak{l}}_B}=\textit{\bf Y}$ because of the relation (\ref{theorembl2}). Note also that for any ${ \mathfrak{l}}$ in ${ \mathfrak{E}}_{{ \mathfrak{S}}}$, we have ${ \mathfrak{l}} \sqsupseteq_{{}_{{{ \mathfrak{E}}_{{ \mathfrak{S}}}}}} { \mathfrak{l}}_B$ implies ${ \mathfrak{b}}_{{ \mathfrak{l}}}=\textit{\bf Y}$ because the property (\ref{theorembl2}) implies the relation 
\begin{eqnarray}
\forall { \mathfrak{l}}, { \mathfrak{l}}'\in { \mathfrak{E}}_{{ \mathfrak{S}}},&& ({ \mathfrak{l}} \sqsubseteq_{{}_{{ \mathfrak{E}}_{{ \mathfrak{S}}}}} { \mathfrak{l}}')\Rightarrow ({ \mathfrak{b}}_{{ \mathfrak{l}}} \leq { \mathfrak{b}}_{{ \mathfrak{l}}'}).\label{theorembl1}
\end{eqnarray}
Conversely,  for any ${ \mathfrak{l}}$ in ${ \mathfrak{E}}_{{ \mathfrak{S}}}$,  the property ${ \mathfrak{b}}_{{ \mathfrak{l}}}=\textit{\bf Y}$ implies ${ \mathfrak{l}} \sqsupseteq_{{}_{{{ \mathfrak{E}}_{ \mathfrak{S}}}}} { \mathfrak{l}}_B$ due to the definition of ${ \mathfrak{l}}_B$. \\
Let us now introduce 
\begin{eqnarray}
\Sigma_{{}_{{ \mathfrak{l}}_B}} &:=& \bigsqcap{}^{{}^{ \mathfrak{S}}}({\epsilon}^{ \mathfrak{S}}_{{ \mathfrak{l}}_B})^{\;-1}(\textit{\bf Y}). 
\end{eqnarray}
We could suppose that ${ \mathfrak{l}}_B={ \mathfrak{l}}_{(\Sigma_{{ \mathfrak{l}}_B},\Sigma'_{{ \mathfrak{l}}_B})}$ for a certain $\Sigma'_{{ \mathfrak{l}}_B}\in {\mathfrak{S}}$. However, we note that, because of (\ref{theorembl2}) and (\ref{theorembl4}), we have ${ \mathfrak{l}}_{(\Sigma_{{ \mathfrak{l}}_B},\centerdot)}\sqsubset_{{}_{{{ \mathfrak{E}}_{ \mathfrak{S}}}}} { \mathfrak{l}}_{(\Sigma_{{ \mathfrak{l}}_B},\Sigma'_{{ \mathfrak{l}}_B})}$ and ${ \mathfrak{b}}_{{ \mathfrak{l}}_{(\Sigma_{{ \mathfrak{l}}_B},\centerdot)}}={ \mathfrak{b}}_{{ \mathfrak{l}}_{(\Sigma_{{ \mathfrak{l}}_B},\Sigma'_{{ \mathfrak{l}}_B})}\sqcap_{{}_{{{ \mathfrak{E}}_{{ \mathfrak{S}}}}}} { \mathfrak{Y}}_{{ \mathfrak{E}}_{{ \mathfrak{S}}}} }= { \mathfrak{b}}_{{ \mathfrak{l}}_{(\Sigma_{{ \mathfrak{l}}_B},\Sigma'_{{ \mathfrak{l}}_B})}}\wedge { \mathfrak{b}}_{{ \mathfrak{Y}}_{{ \mathfrak{E}}_{{ \mathfrak{S}}}} }=\textit{\bf Y}$ which would contradict the definition of ${ \mathfrak{l}}_B$.  Hence, we have to accept that ${ \mathfrak{l}}_B={ \mathfrak{l}}_{(\Sigma_{{ \mathfrak{l}}_B},\centerdot)}$. \\
Due to this peculiar form of ${ \mathfrak{l}}_B$, we deduce that, for any ${ \mathfrak{l}}_{(\Sigma,\Sigma')}$ in ${ \mathfrak{E}}_{{ \mathfrak{S}}}$, the property ${ \mathfrak{l}}_{(\Sigma,\Sigma')} \not\sqsupseteq_{{}_{{{ \mathfrak{E}}_{ \mathfrak{S}}}}} { \mathfrak{l}}_B$ is then equivalent to the property $\Sigma\not\sqsupseteq_{{}_{{ \mathfrak{S}}}} \Sigma_{{ \mathfrak{l}}_B}$. Then, if ${ \mathfrak{l}}_{(\Sigma,\Sigma')} \not\sqsupseteq_{{}_{{{ \mathfrak{E}}_{{ \mathfrak{S}}}}}} { \mathfrak{l}}_B$ we cannot have $\epsilon^{ \mathfrak{S}}_{{ \mathfrak{l}}_{(\Sigma,\Sigma')}}(\Sigma_{{}_{{ \mathfrak{l}}_B}})=\textit{\bf Y}$ (by the very definition of $\epsilon^{ \mathfrak{S}}$). \\
On another part, for any ${ \mathfrak{l}}$ in ${ \mathfrak{E}}_{{ \mathfrak{S}}}$ such that ${ \mathfrak{l}} \sqsupseteq_{{}_{{{ \mathfrak{E}}_{{ \mathfrak{S}}}}}} { \mathfrak{l}}_B$, we have $\epsilon^{ \mathfrak{S}}_{{ \mathfrak{l}}}(\Sigma_{{}_{{ \mathfrak{l}}_B}})\geq \epsilon^{ \mathfrak{S}}_{{ \mathfrak{l}}_B}(\Sigma_{{}_{{ \mathfrak{l}}_B}}) = \textit{\bf Y}$, i.e. $\epsilon^{ \mathfrak{S}}_{{ \mathfrak{l}}}(\Sigma_{{}_{{ \mathfrak{l}}_B}}) = \textit{\bf Y}$. \\ 
We then conclude that the property $\epsilon^{ \mathfrak{S}}_{{ \mathfrak{l}}}(\Sigma_{{}_{{ \mathfrak{l}}_B}}) = \textit{\bf Y}$ is equivalent to the property ${ \mathfrak{l}} \sqsupseteq_{{}_{{{ \mathfrak{E}}_{{ \mathfrak{S}}}}}} { \mathfrak{l}}_B$, or in other words 
\begin{eqnarray}
\forall { \mathfrak{l}}\in { \mathfrak{E}}_{{ \mathfrak{S}}},\;\;\;\; \epsilon^{ \mathfrak{S}}_{{ \mathfrak{l}}}(\Sigma_{{}_{{ \mathfrak{l}}_B}}) = \textit{\bf Y} &\Leftrightarrow & { \mathfrak{b}}_{{ \mathfrak{l}}}=\textit{\bf Y}. 
 \end{eqnarray}
  Using (\ref{theorembl3}) and (\ref{etbar}),  we then deduce that $(\epsilon^{ \mathfrak{S}}_{{ \mathfrak{l}}}(\Sigma_{{}_{{ \mathfrak{l}}_B}}) = \textit{\bf N})\Leftrightarrow (\epsilon^{ \mathfrak{S}}_{\overline{\, \mathfrak{l}\,}}(\Sigma_{{}_{{ \mathfrak{l}}_B}}) = \textit{\bf Y})\Leftrightarrow ({ \mathfrak{b}}_{\overline{\, \mathfrak{l}\,}}=\textit{\bf Y}) \Leftrightarrow ({ \mathfrak{b}}_{{ \mathfrak{l}}}=\textit{\bf N})$. \\
  As a final conclusion, we have for any ${ \mathfrak{l}}\in { \mathfrak{E}}_{{ \mathfrak{S}}}$ the equality $\epsilon^{ \mathfrak{S}}_{{ \mathfrak{l}}}(\Sigma_{{}_{{ \mathfrak{l}}_B}})={ \mathfrak{b}}_{{ \mathfrak{l}}}$. \\
Let us finally suppose that there exists $\sigma_1,\sigma_2\in { \mathfrak{S}}$ such that $\forall { \mathfrak{l}}\in { \mathfrak{E}}_{{ \mathfrak{S}}}, \; \epsilon_{{ \mathfrak{l}}}(\sigma_1)={ \mathfrak{b}}_{{ \mathfrak{l}}}=\epsilon_{{ \mathfrak{l}}}(\sigma_2)$. We then obtain $\sigma_1=\sigma_2$. The state $\sigma$ is then uniquely fixed to be $\Sigma_{{}_{{ \mathfrak{l}}_B}}$.
\end{proof}

\begin{theorem}
For any map $f^\ast:{ \mathfrak{E}}_{ \mathfrak{S}_B} \longrightarrow { \mathfrak{E}}_{{ \mathfrak{S}}_A}$ satisfying
\begin{eqnarray}
\forall E\subseteq { \mathfrak{E}}_{{ \mathfrak{S}}_B},&& f^{\ast}(\bigsqcap{}^{{}^{{ \mathfrak{E}}_{{ \mathfrak{S}}_B}}} \! E) = \bigsqcap{}^{{}^{{ \mathfrak{E}}_{{ \mathfrak{S}}_A}}}_{{}_{{ \mathfrak{l}}\in E}} \; f^{\ast}({ \mathfrak{l}})\\
\forall { \mathfrak{l}} \in { \mathfrak{E}}_{{ \mathfrak{S}}_B},&& f^{\ast}(\; \overline{\,{ \mathfrak{l}}\,} \;) = \overline{f^{\ast}({ \mathfrak{l}})}\\
&& f^{\ast}(\;{ \mathfrak{Y}}_{{ \mathfrak{E}}_{{ \mathfrak{S}}_B}}\;) = { \mathfrak{Y}}_{{ \mathfrak{E}}_{{ \mathfrak{S}}_A}},
\end{eqnarray}
there exists a unique map $f:{ \mathfrak{S}}_A \longrightarrow { \mathfrak{S}}_B$ such that $(f,f^{\ast})$ is a morphism from the States/Effects Chu space $({ \mathfrak{S}}_{A},{ \mathfrak{E}}_{{ \mathfrak{S}}_A},\epsilon^{{ \mathfrak{S}}_A})$ to the States/Effects Chu space $({ \mathfrak{S}}_{B},{ \mathfrak{E}}_{{ \mathfrak{S}}_B},\epsilon^{{ \mathfrak{S}}_B})$.
\end{theorem}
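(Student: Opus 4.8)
\emph{Proof proposal.} The plan is to read this off as the exact dual of the preceding theorem, with Theorem~\ref{blepsilonsigma} playing the role that Lemma~\ref{aepsilonsigma} played there. So the work all goes into checking that a suitable pointwise construction feeds the hypotheses of Theorem~\ref{blepsilonsigma}.

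First I would fix $f^{\ast}$ with its three stated properties, fix an arbitrary $\sigma_A\in{ \mathfrak{S}}_A$, and define a map $B:{ \mathfrak{E}}_{{ \mathfrak{S}}_B}\to{ \mathfrak{B}}$ by ${ \mathfrak{b}}_{{ \mathfrak{l}}}:=\epsilon^{{ \mathfrak{S}}_A}_{f^{\ast}({ \mathfrak{l}})}(\sigma_A)$. The next step is to verify that $B$ satisfies the hypotheses (\ref{theorembl2}), (\ref{theorembl3}) and (\ref{theorembl4}) of Theorem~\ref{blepsilonsigma}: property (\ref{theorembl2}) follows by chaining the assumption $f^{\ast}(\bigsqcap E)=\bigsqcap_{{ \mathfrak{l}}\in E}f^{\ast}({ \mathfrak{l}})$ with (\ref{axiomEinfsemilattice}); property (\ref{theorembl3}) follows from $f^{\ast}(\overline{{ \mathfrak{l}}})=\overline{f^{\ast}({ \mathfrak{l}})}$ together with (\ref{etbar}); and property (\ref{theorembl4}) follows from $f^{\ast}({ \mathfrak{Y}}_{{ \mathfrak{E}}_{{ \mathfrak{S}}_B}})={ \mathfrak{Y}}_{{ \mathfrak{E}}_{{ \mathfrak{S}}_A}}$ together with (\ref{ety}).

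Then I would invoke Theorem~\ref{blepsilonsigma} to obtain, for this $\sigma_A$, a \emph{unique} $\sigma_B\in{ \mathfrak{S}}_B$ with $\epsilon^{{ \mathfrak{S}}_B}_{{ \mathfrak{l}}}(\sigma_B)={ \mathfrak{b}}_{{ \mathfrak{l}}}$ for all ${ \mathfrak{l}}$, and set $f(\sigma_A):=\sigma_B$. The uniqueness clause is exactly what makes $f$ a genuine function ${ \mathfrak{S}}_A\to{ \mathfrak{S}}_B$, and by construction $\epsilon^{{ \mathfrak{S}}_B}_{{ \mathfrak{l}}}(f(\sigma_A))=\epsilon^{{ \mathfrak{S}}_A}_{f^{\ast}({ \mathfrak{l}})}(\sigma_A)$ for every $\sigma_A$ and every ${ \mathfrak{l}}$, which is precisely the Chu duality relation (\ref{defchumorphism}); hence $(f,f^{\ast})$ is a morphism. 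For uniqueness of $f$ itself: if $(g,f^{\ast})$ were another morphism, then $\epsilon^{{ \mathfrak{S}}_B}_{{ \mathfrak{l}}}(g(\sigma_A))=\epsilon^{{ \mathfrak{S}}_A}_{f^{\ast}({ \mathfrak{l}})}(\sigma_A)=\epsilon^{{ \mathfrak{S}}_B}_{{ \mathfrak{l}}}(f(\sigma_A))$ for all ${ \mathfrak{l}}$, so the separation property (\ref{Chuseparated}) forces $g=f$.

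I do not expect a genuine obstacle; the one point worth flagging is \emph{why} this ``right-component'' version needs the extra hypotheses (\ref{theorembl3}) and (\ref{theorembl4}) on $f^{\ast}$ that the ``left-component'' version did not impose on $f$: it is because ${ \mathfrak{E}}_{{ \mathfrak{S}}_B}$ carries the involution $\overline{\;\cdot\;}$ and the distinguished element ${ \mathfrak{Y}}$, whereas ${ \mathfrak{S}}_B$ does not, so Theorem~\ref{blepsilonsigma} (unlike Lemma~\ref{aepsilonsigma}) demands compatibility with both — and these two conditions on $f^{\ast}$ are exactly what transports that compatibility to $B$.
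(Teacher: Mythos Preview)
Your proposal is correct and follows exactly the same route as the paper: define $f(\sigma_A)$ as the unique state in ${ \mathfrak{S}}_B$ provided by Theorem~\ref{blepsilonsigma} applied to ${ \mathfrak{b}}_{{ \mathfrak{l}}}=\epsilon^{{ \mathfrak{S}}_A}_{f^{\ast}({ \mathfrak{l}})}(\sigma_A)$. You have simply spelled out in full the verification of the three hypotheses and the uniqueness argument via (\ref{Chuseparated}), which the paper leaves implicit.
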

\begin{proof}
Direct consequence of Lemma \ref{blepsilonsigma}.  It suffices to define $f$ as the map which associates to any $\sigma \in { \mathfrak{S}_A}$ the unique element $\sigma'\in { \mathfrak{S}_B}$ such that $\forall { \mathfrak{l}}\in { \mathfrak{E}}_{ \mathfrak{S}_B}, \; { \epsilon}^{ \mathfrak{S}_B}_{{ \mathfrak{l}}}(\sigma')={ \epsilon}^{ \mathfrak{S}_A}_{f^{\ast}({ \mathfrak{l}})}(\sigma)$ (in the application of the Lemma \ref{blepsilonsigma}, we simply choose ${ \mathfrak{E}}:={ \mathfrak{E}}_{{ \mathfrak{S}}_B}$). 
\end{proof}

As a consequence of the two last theorems, the couple of maps $(f,f^{\ast})$ defining a morphism from $({ \mathfrak{S}}_A,{ \mathfrak{E}}_{{ \mathfrak{S}}_A},\epsilon^{{ \mathfrak{S}}_A})$ to $({ \mathfrak{S}}_B,{ \mathfrak{E}}_{{ \mathfrak{S}}_B},\epsilon^{{ \mathfrak{S}}_B})$ can be reduced to the single data $f$ (or to the single data $f^\ast$ as well). We will then speak shortly of "the morphism $f$ from the space of states ${ \mathfrak{S}}_{A}$ to the space of states ${ \mathfrak{S}}_{B}$" rather than "the morphism from the states/effects Chu space $({ \mathfrak{S}}_A,{ \mathfrak{E}}_{{ \mathfrak{S}}_A},\epsilon^{{ \mathfrak{S}}_A})$ to the states/effects Chu space $({ \mathfrak{S}}_B,{ \mathfrak{E}}_{{ \mathfrak{S}}_B},\epsilon^{{ \mathfrak{S}}_B})$".

\begin{definition} The set of morphisms from ${ \mathfrak{S}}_A$ to ${ \mathfrak{S}}_B$ will be denoted ${\mathfrak{C}}({ \mathfrak{S}}_A,{ \mathfrak{S}}_B)$.
\end{definition}

\begin{definition}
We define the infimum of two maps $f$ and $g$ satisfying (\ref{f12cap}) (resp. two maps $f^\ast$ and $g^\ast$ satisfying (\ref{f21cap})) by $\forall \sigma \in { \mathfrak{S}}_A, (f\sqcap_{{}_{{\mathfrak{C}}({ \mathfrak{S}}_A,{ \mathfrak{S}}_B)}} g)(\sigma):= f(\sigma) \sqcap_{{}_{{ \mathfrak{S}}_B}} g(\sigma)$.
\end{definition}

\begin{theorem}
The infimum of a morphism $(f ,f^{\ast})$ from $({ \mathfrak{S}}_A,{ \mathfrak{E}}_{{ \mathfrak{S}}_A},\epsilon^{{ \mathfrak{S}}_A})$ to $({ \mathfrak{S}}_B,{ \mathfrak{E}}_{{ \mathfrak{S}}_B},\epsilon^{{ \mathfrak{S}}_B})$ with another morphism $(g ,g^{\ast})$ defined from $({ \mathfrak{S}}_A,{ \mathfrak{E}}_{{ \mathfrak{S}}_A},\epsilon^{{ \mathfrak{S}}_A})$ to $({ \mathfrak{S}}_B,{ \mathfrak{E}}_{{ \mathfrak{S}}_B},\epsilon^{{ \mathfrak{S}}_B})$ defines a valid morphism from $({ \mathfrak{S}}_A,{ \mathfrak{E}}_{{ \mathfrak{S}}_A},\epsilon^{{ \mathfrak{S}}_A})$ to $({ \mathfrak{S}}_B,{ \mathfrak{E}}_{{ \mathfrak{S}}_B},\epsilon^{{ \mathfrak{S}}_B})$.
\end{theorem}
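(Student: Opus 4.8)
The plan is to produce the right component of the candidate morphism by hand and then check the single defining identity (\ref{defchumorphism}); the existence/uniqueness theorems proved above are available as a fallback but are not actually needed here.

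First I would set, for every $\sigma_A\in\mathfrak{S}_A$ and every $\mathfrak{l}_B\in\mathfrak{E}_{\mathfrak{S}_B}$,
\[
h(\sigma_A):=f(\sigma_A)\sqcap_{\mathfrak{S}_B}g(\sigma_A),\qquad h^{\ast}(\mathfrak{l}_B):=f^{\ast}(\mathfrak{l}_B)\sqcap_{\mathfrak{E}_{\mathfrak{S}_A}}g^{\ast}(\mathfrak{l}_B);
\]
this is exactly the infimum of $(f,f^{\ast})$ with $(g,g^{\ast})$ in the sense of the preceding Definition. Both formulas make sense because $\mathfrak{S}_B$ is an Inf semi-lattice and $\mathfrak{E}_{\mathfrak{S}_A}$ is a down-complete (in particular binary) Inf semi-lattice, so the two pointwise infima always exist; this existence of binary infima is really the only well-definedness point that has to be observed.

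Next I would verify (\ref{defchumorphism}) for $(h,h^{\ast})$ by a short chain of equalities: apply (\ref{axiomsigmainfsemilattice}) to distribute $\epsilon^{\mathfrak{S}_B}_{\mathfrak{l}_B}$ over the binary infimum in $\mathfrak{S}_B$, then use the duality (\ref{defchumorphism}) for $f$ and for $g$ separately, and finally re-gather the two terms with (\ref{axiomEinfsemilattice}):
\begin{align*}
\epsilon^{\mathfrak{S}_B}_{\mathfrak{l}_B}\!\big(h(\sigma_A)\big)
&=\epsilon^{\mathfrak{S}_B}_{\mathfrak{l}_B}\!\big(f(\sigma_A)\big)\wedge\epsilon^{\mathfrak{S}_B}_{\mathfrak{l}_B}\!\big(g(\sigma_A)\big)\\
&=\epsilon^{\mathfrak{S}_A}_{f^{\ast}(\mathfrak{l}_B)}(\sigma_A)\wedge\epsilon^{\mathfrak{S}_A}_{g^{\ast}(\mathfrak{l}_B)}(\sigma_A)\\
&=\epsilon^{\mathfrak{S}_A}_{f^{\ast}(\mathfrak{l}_B)\sqcap_{\mathfrak{E}_{\mathfrak{S}_A}}g^{\ast}(\mathfrak{l}_B)}(\sigma_A)=\epsilon^{\mathfrak{S}_A}_{h^{\ast}(\mathfrak{l}_B)}(\sigma_A).
\end{align*}
Since this is precisely the morphism condition (\ref{defchumorphism}) for the pair $(h,h^{\ast})$, the infimum $(h,h^{\ast})$ is a morphism from $(\mathfrak{S}_A,\mathfrak{E}_{\mathfrak{S}_A},\epsilon^{\mathfrak{S}_A})$ to $(\mathfrak{S}_B,\mathfrak{E}_{\mathfrak{S}_B},\epsilon^{\mathfrak{S}_B})$; by uniqueness of the right component this also confirms that the right component attached to $h$ is indeed $f^{\ast}\sqcap g^{\ast}$.

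A slightly more roundabout alternative would be to check only that $h$ preserves arbitrary infima — using (\ref{f12cap}) for $f$ and $g$ together with associativity and commutativity of $\bigsqcap^{\mathfrak{S}_B}$ one gets $h\big(\bigsqcap^{\mathfrak{S}_A}_{i\in I}\sigma_i\big)=\big(\bigsqcap^{\mathfrak{S}_B}_{i\in I}f(\sigma_i)\big)\sqcap_{\mathfrak{S}_B}\big(\bigsqcap^{\mathfrak{S}_B}_{i\in I}g(\sigma_i)\big)=\bigsqcap^{\mathfrak{S}_B}_{i\in I}h(\sigma_i)$ — and then invoke the earlier theorem producing a (unique) right component for any such infimum-preserving map. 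I do not expect any real obstacle: the statement is a purely formal consequence of the duality relation (\ref{defchumorphism}) and the compatibility of $\epsilon^{\mathfrak{S}}$ with infima in $\mathfrak{S}$ and in $\mathfrak{E}_{\mathfrak{S}}$ recorded in (\ref{axiomsigmainfsemilattice})--(\ref{axiomEinfsemilattice}); the only care required is the existence of the relevant binary infima, which is guaranteed by the semi-lattice hypotheses on the spaces involved.
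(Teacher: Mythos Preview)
Your proof is correct and is essentially the same approach as the paper's, which simply records it as a ``direct consequence of the duality property''; you have made explicit the short chain of equalities using (\ref{axiomsigmainfsemilattice}), (\ref{defchumorphism}), and (\ref{axiomEinfsemilattice}) that this phrase is meant to summarize, and your explicit identification of the right component $h^{\ast}=f^{\ast}\sqcap g^{\ast}$ is a welcome addition.
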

\begin{proof}
Direct consequence of the duality property.
\end{proof}

In the following, we will be particularly interested to a certain class of morphisms called "measurements".

\begin{definition} A {\em measurement} is a morphism ${ \mathfrak{m}}$ from the state space ${ \mathfrak{S}}$ to the space of evaluations ${ \mathfrak{B}}$.  
The space of measurements associated to ${ \mathfrak{S}}$ will be denoted ${\mathfrak{M}}_{{}_{{ \mathfrak{S}}}}$.
\end{definition}

\begin{theorem}
The space of measurements ${\mathfrak{M}}_{{}_{{ \mathfrak{S}}}}$ is isomorphic to the space of effects ${ \mathfrak{E}}_{ \mathfrak{S}}$. 
\end{theorem}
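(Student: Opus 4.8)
The plan is to build an explicit bijection $\Phi:\mathfrak{E}_{\mathfrak{S}}\to{\mathfrak{M}}_{{}_{\mathfrak{S}}}$ and then check that it transports the Inf semi-lattice structure together with the accessory data (bottom element, involution, distinguished effect $\mathfrak{Y}$). First I would observe that $(\mathfrak{B},\leq)$ is itself a space of states in the sense of Definition~\ref{definitiongeneralizedspaceofstates}: being finite it is a down complete Inf semi-lattice, and it carries the bottom element $\bot$. Consequently a \emph{measurement}, i.e. a morphism from the States/Effects Chu space of $\mathfrak{S}$ to that of $\mathfrak{B}$, is — by the reduction theorems proved above — completely and uniquely determined by its left component, and the admissible left components are exactly the maps $f:\mathfrak{S}\to\mathfrak{B}$ satisfying the infimum-preservation property~\eqref{f12cap}, where $\bigsqcap^{\mathfrak{B}}$ coincides with the meet $\bigwedge$ of $\mathfrak{B}$.

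Next I would define $\Phi$ by sending $\mathfrak{l}\in\mathfrak{E}_{\mathfrak{S}}$ to the measurement whose left component is $\sigma\mapsto\epsilon^{\mathfrak{S}}_{\mathfrak{l}}(\sigma)$. This is well defined because, by~\eqref{axiomsigmainfsemilattice}, the map $\sigma\mapsto\epsilon^{\mathfrak{S}}_{\mathfrak{l}}(\sigma)$ preserves arbitrary infima and hence is the left component of a unique morphism into $\mathfrak{B}$. Injectivity of $\Phi$ is precisely the extensionality property~\eqref{Chuextensional}. For surjectivity, given a measurement $\mathfrak{m}$ with left component $f$, property~\eqref{f12cap} reads $f(\bigsqcap^{\mathfrak{S}}_{i\in I}\sigma_i)=\bigwedge_{i\in I}f(\sigma_i)$, so Lemma~\ref{aepsilonsigma} applied with $\mathfrak{a}_\sigma:=f(\sigma)$ produces a unique $\mathfrak{l}\in\mathfrak{E}_{\mathfrak{S}}$ with $\epsilon^{\mathfrak{S}}_{\mathfrak{l}}(\sigma)=f(\sigma)$ for all $\sigma$; since a morphism into $\mathfrak{B}$ is fixed by its left component, $\Phi(\mathfrak{l})=\mathfrak{m}$.

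It then remains to check that $\Phi$ is a homomorphism for the relevant structure. For a family $\{\mathfrak{l}_i\}_{i\in I}\subseteq\mathfrak{E}_{\mathfrak{S}}$, the left component of $\Phi(\bigsqcap^{\mathfrak{E}_{\mathfrak{S}}}_{i\in I}\mathfrak{l}_i)$ is, by~\eqref{axiomEinfsemilattice}, the map $\sigma\mapsto\bigwedge_{i\in I}\epsilon^{\mathfrak{S}}_{\mathfrak{l}_i}(\sigma)$, which by the definition of the infimum of morphisms is the left component of $\bigsqcap_{i\in I}\Phi(\mathfrak{l}_i)$; hence $\Phi$ is an isomorphism of down complete Inf semi-lattices (in particular order-preserving in both directions), sending $\bot_{\mathfrak{E}_{\mathfrak{S}}}$ to the constant measurement $\sigma\mapsto\bot$. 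Moreover $\overline{\,\cdot\,}:\mathfrak{B}\to\mathfrak{B}$ preserves infima, so $\mathfrak{m}\mapsto(\sigma\mapsto\overline{\mathfrak{m}(\sigma)})$ is a well-defined involution on ${\mathfrak{M}}_{{}_{\mathfrak{S}}}$, and~\eqref{etbar} gives $\Phi(\overline{\,\mathfrak{l}\,})=\overline{\Phi(\mathfrak{l})}$, while~\eqref{ety} identifies $\Phi(\mathfrak{Y}_{\mathfrak{E}_{\mathfrak{S}}})$ with the constant measurement $\sigma\mapsto\textit{\bf Y}$.

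The only point that genuinely requires care is the very first one: the statement is meaningful only once $\mathfrak{B}$ is read as its own States/Effects Chu space $(\mathfrak{B},\mathfrak{E}_{\mathfrak{B}},\epsilon^{\mathfrak{B}})$, so that the previously established reduction of a Chu morphism to its left component is applicable here. Once that identification is in place, the theorem is an immediate corollary of Lemma~\ref{aepsilonsigma} together with the structural identities~\eqref{axiomsigmainfsemilattice}, \eqref{axiomEinfsemilattice}, \eqref{etbar} and~\eqref{ety} for the evaluation map; there is essentially no computational content.
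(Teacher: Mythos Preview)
Your proof is correct and follows essentially the same route as the paper: define $\Phi(\mathfrak{l})(\sigma)=\epsilon^{\mathfrak{S}}_{\mathfrak{l}}(\sigma)$, get injectivity from extensionality, and surjectivity by recovering $\mathfrak{l}$ from a given measurement. The only cosmetic differences are that you invoke Lemma~\ref{aepsilonsigma} for surjectivity where the paper reproduces that construction inline, and you spell out the preservation of $\sqcap$, $\overline{\,\cdot\,}$ and $\mathfrak{Y}$, which the paper leaves implicit after asserting that $\varphi$ is a homomorphism.
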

\begin{proof}
An effect ${ \mathfrak{l}}\in { \mathfrak{E}}_{ \mathfrak{S}}$ being given, the map ${ \mathfrak{m}}_{{ \mathfrak{l}}}$ defined by
\begin{eqnarray}
&& { \mathfrak{m}}_{{ \mathfrak{l}}}(\sigma) := \epsilon^{ \mathfrak{S}}_{ \mathfrak{l}}(\sigma),\;\;\;\forall \sigma\in { \mathfrak{S}}.\label{measurementeffect}
\end{eqnarray}
 is a well defined homomorphism from ${ \mathfrak{S}}$ to ${ \mathfrak{B}}$.  Hence, the map
\begin{eqnarray}
&&\begin{array}{rcrcl}
\varphi & : & { \mathfrak{E}}_{ \mathfrak{S}} & \longrightarrow & {\mathfrak{M}}_{{}_{{ \mathfrak{S}}}}\\
& & { \mathfrak{l}} & \mapsto & { \mathfrak{m}}_{{ \mathfrak{l}}}\label{defvarphi}
\end{array}
\end{eqnarray}
is an homomorphism.  It is moreover bijective. Indeed, let us consider any ${ \mathfrak{m}}$ in ${\mathfrak{M}}_{{}_{{ \mathfrak{S}}}}$.  If $\{\, \sigma\;\vert\; { \mathfrak{m}}({\sigma})=\textit{\bf Y}\,\}$ and $\{\, \sigma\;\vert\; { \mathfrak{m}}({\sigma})=\textit{\bf N}\,\}$ are not empty, they are principal filters, and it suffices to associate to ${ \mathfrak{m}}$ the states $\sigma_A:=\bigsqcap^{{}^{{ \mathfrak{S}}}}\{\, \sigma\;\vert\; { \mathfrak{m}}({\sigma})=\textit{\bf Y}\,\}$,  $\sigma_A':=\bigsqcap^{{}^{{ \mathfrak{S}}}}\{\, \sigma\;\vert\; { \mathfrak{m}}({\sigma})=\textit{\bf N}\,\}$ and the effect ${ \mathfrak{l}}:={ \mathfrak{l}}_{(\sigma_A,\sigma_A')} $ (the case where some or all of these subsets are empty is treated immediately).\\

As usual,  the data ${ \mathfrak{m}}_{{ \mathfrak{l}}}$ suffices to define a Chu morphism $({ \mathfrak{m}}_{{ \mathfrak{l}}},{ \mathfrak{m}}^\ast_{{ \mathfrak{l}}})$. Indeed, the map ${ \mathfrak{m}}^\ast_{{ \mathfrak{l}}} : { \mathfrak{E}}_{ \mathfrak{B}} \longrightarrow { \mathfrak{E}}_{ \mathfrak{S}}$ is defined by the duality relation (\ref{defchumorphism}), i.e.
 \begin{eqnarray}
 \forall \sigma\in { \mathfrak{S}}, \forall { \mathfrak{u}}\in { \mathfrak{E}}_{ \mathfrak{B}},&& \epsilon^{ \mathfrak{S}}_{{ \mathfrak{m}}^\ast_{{ \mathfrak{l}}}({ \mathfrak{u}})}(\sigma)=\epsilon^{ \mathfrak{B}}_{{ \mathfrak{u}}}({ \mathfrak{m}}_{{ \mathfrak{l}}}(\sigma))=\epsilon^{ \mathfrak{B}}_{{ \mathfrak{u}}}( \epsilon^{ \mathfrak{S}}_{ \mathfrak{l}}(\sigma))\label{appliedduality}
\end{eqnarray}
The expression of ${ \mathfrak{m}}^\ast_{{ \mathfrak{l}}}({ \mathfrak{u}})$ for any ${ \mathfrak{u}}$ in ${ \mathfrak{E}}_{ \mathfrak{B}}$ is reduced by the application of the three relations (\ref{f21cap})(\ref{f21bar})(\ref{f21Y}) to the property 
\begin{eqnarray}
{ \mathfrak{m}}^\ast_{{ \mathfrak{l}}}({ \mathfrak{l}}_{(\textit{\bf Y},\textit{\bf N})}):={ \mathfrak{l}}\label{mastu}
\end{eqnarray}
 which validity can be checked directly on (\ref{appliedduality}).  
\end{proof}

\subsection{Real structures}\label{subsectionreal}

During this subsection, we will consider the states/effects Chu Space $({ \mathfrak{S}},{ \mathfrak{E}}_{{ \mathfrak{S}}},\epsilon^{{ \mathfrak{S}}})$.

\begin{definition}\label{definitionstarstructure}
A {\em real structure} on a space of states ${ \mathfrak{S}}$ is a pair $(\overline{ \mathfrak{S}},\star)$ where
\begin{itemize}
\item $\overline{ \mathfrak{S}}$ is a subset of ${ \mathfrak{S}}$ satisfying
\begin{eqnarray}
\bot_{{}_{ \mathfrak{S}}} & \in & \overline{ \mathfrak{S}}\\
\forall S\subseteq \overline{ \mathfrak{S}},\;\;\;\; \bigsqcap{}^{{}^{{ \mathfrak{S}}}}S & \in & \overline{ \mathfrak{S}},
\end{eqnarray}
\item $\overline{ \mathfrak{S}}$ is {\em generated by its maximal elements}, i.e. 
 \begin{eqnarray}
\hspace{-1cm}&&\forall \sigma \in \overline{ \mathfrak{S}}, \;\; \sigma= \bigsqcap{}^{{}^{ \overline{ \mathfrak{S}}}}  \underline{\sigma}_{{}_{\overline{ \mathfrak{S}}}} \;\;\textit{\rm where}\;\; \underline{\sigma}_{{}_{ \overline{ \mathfrak{S}}}}:=\{\, \sigma'\in \overline{ \mathfrak{S}}{}^{{}^{pure}}\;\vert\; \sigma'\sqsupseteq_{{}_{\overline{ \mathfrak{S}}}} \sigma\;\} \;\;\textit{\rm and}\;\; \overline{ \mathfrak{S}}{}^{{}^{pure}}:=Max(\overline{ \mathfrak{S}})\;\;\;\;\;\;\;\;\;\;\;\;
\label{completemeetirreducibleordergeneratingS}
\end{eqnarray}
%and 
%\begin{eqnarray}
%\hspace{-0.5cm}\forall \alpha,\beta\in \overline{ \mathfrak{S}}{}^{{}^{pure}},&& \beta\not=\alpha \;\;\Rightarrow \;\; (\alpha\sqcap_{{}_{\overline{ \mathfrak{S}}}}\beta) \sqcoversubset_{{}_{\overline{ \mathfrak{S}}}}\alpha,\beta.\;\;\;\;\;\;\;\;\;\;\;\;\;\;\;\;\;\;\label{coveringpropertySbar}\\
%\hspace{-0.5cm}\forall \lambda,\alpha,\beta,\gamma,\delta\in \overline{ \mathfrak{S}}{}^{{}^{pure}} \;\textit{\rm distinct},&& (\,(\alpha\sqcap_{{}_{\overline{\mathfrak{S}}}}\beta)\not=(\gamma\sqcap_{{}_{\overline{\mathfrak{S}}}}\delta)\;\;\;\textit{\rm and}\;\;\;(\alpha\sqcap_{{}_{\overline{\mathfrak{S}}}}\beta),(\gamma\sqcap_{{}_{\overline{\mathfrak{S}}}}\delta)\sqcoversubset_{{}_{\mathfrak{S}}}\lambda \,) \;\;\Rightarrow \nonumber\\
%&&\hspace{1cm} (\alpha\sqcap_{{}_{\overline{ \mathfrak{S}}}}\beta\sqcap_{{}_{\overline{ \mathfrak{S}}}}\gamma\sqcap_{{}_{\overline{ \mathfrak{S}}}}\delta) \sqcoversubset_{{}_{\overline{ \mathfrak{S}}}}(\alpha\sqcap_{{}_{\overline{\mathfrak{S}}}}\beta),(\gamma\sqcap_{{}_{\overline{\mathfrak{S}}}}\delta).\;\;\;\;\;\;\;\;\;\;\;\;\;\;\;\;\;\;\label{secondcoveringpropertySbar}
%\end{eqnarray}
\item $\star$ is a map from $\overline{ \mathfrak{S}}\smallsetminus \{\bot_{{}_{ \mathfrak{S}}}\}$ to $\overline{ \mathfrak{S}}\smallsetminus \{\bot_{{}_{ \mathfrak{S}}}\}$ satisfying
\begin{eqnarray}
\forall \sigma\in \overline{ \mathfrak{S}}\smallsetminus \{\bot_{{}_{ \mathfrak{S}}}\},&& (\sigma^{\star})^{\star}=\sigma \label{involutive}\\
\forall \sigma_1,\sigma_2\in \overline{ \mathfrak{S}}\smallsetminus \{\bot_{{}_{ \mathfrak{S}}}\},&& \sigma_1\sqsubseteq_{{}_{\overline{ \mathfrak{S}}}}\sigma_2\;\;\Rightarrow\;\; \sigma_2^\star \sqsubseteq_{{}_{\overline{ \mathfrak{S}}}} \sigma_1^\star \label{orderreversing}\\
\forall \sigma\in \overline{ \mathfrak{S}}\smallsetminus \{\bot_{{}_{ \mathfrak{S}}}\},&&\neg\; \widehat{\sigma\sigma^\star}{}^{{}^{{ \mathfrak{S}}}},\label{starcomplement}
\end{eqnarray}
%Note that in the property (\ref{starcomplement}), we have used the binary relation denoted ${}{\bowtie}_{{}_{ \mathfrak{S}}}$ and defined on $\overline{ \mathfrak{S}}$ by
%\begin{eqnarray}
%&&\hspace{-0.5cm} \forall (\sigma,\sigma')\in \overline{ \mathfrak{S}}^{\times 2},\;\;\;\;\;\sigma\, {}{\bowtie}_{{}_{ \mathfrak{S}}}\,\sigma'  :\Leftrightarrow \nonumber\\
%&&\hspace{1cm}  (\,(\forall \sigma'' \sqsubset_{{}_{{ \mathfrak{S}}}}\sigma', \widehat{\sigma\sigma''}{}^{{}^{{ \mathfrak{S}}}})\;\;\textit{\rm and}\;\; (\forall \sigma'' \sqsubset_{{}_{{ \mathfrak{S}}}}\sigma, \widehat{\sigma'\sigma''}{}^{{}^{{ \mathfrak{S}}}})\;\textit{\rm and}\; (\neg\; \widehat{\sigma\sigma'}{}^{{}^{{ \mathfrak{S}}}})\,).\;\;\;\;\;\;\;\;\;\;\;\;\;\;\;\;\label{quasiantipodal}
%\end{eqnarray}
\item the sub Inf semi-lattice of ${ \mathfrak{E}}_{ \mathfrak{S}}$ formed by the elements of the following set
\begin{eqnarray}
\hspace{-0.5cm}\{\,{ \mathfrak{l}}_{(\sigma,\sigma')}\;\vert\; \sigma,\sigma'\in \overline{ \mathfrak{S}}\smallsetminus \{\bot_{{}_{{ \mathfrak{S}}}}\}, \sigma'\sqsupseteq_{{}_{\overline{ \mathfrak{S}}}} \sigma^\star\,\} \cup \{ { \mathfrak{l}}_{{}_{(\sigma,\centerdot)}}\;\vert\; \sigma\in \overline{ \mathfrak{S}}\;\}\cup \{ { \mathfrak{l}}_{{}_{(\centerdot,\sigma)}}\;\vert\; \sigma\in \overline{ \mathfrak{S}}\;\}\cup \{\, { \mathfrak{l}}_{{}_{(\centerdot,\centerdot)}}\,\}\;\;\;\;\;\;\label{reducedspaceofeffects}
\end{eqnarray}
will be denoted $\overline{ \mathfrak{E}}_{ \mathfrak{S}}$ and will be assumed to satisfy
\begin{eqnarray}
&& \forall \sigma,\sigma'\in { \mathfrak{S}},\;\;\;\;\;\;\;\;\;\;\;\;(\, \forall { \mathfrak{l}}\in \overline{ \mathfrak{E}}_{ \mathfrak{S}},\; { \epsilon}^{ \mathfrak{S}}_{ { \mathfrak{l}}}(\sigma)={ \epsilon}^{ \mathfrak{S}}_{ { \mathfrak{l}}}(\sigma') \,) \Leftrightarrow  (\, \sigma= \sigma' \,).\label{realChuseparated}%%%%
\end{eqnarray}
\end{itemize}

\vspace{0.3cm}
The elements of $\overline{ \mathfrak{S}}$ will be called {\em real states}, whereas the elements of ${ \mathfrak{S}}\smallsetminus \overline{ \mathfrak{S}}$ will be called {\em hidden states}. \\ 
The elements of $\overline{ \mathfrak{S}}^{{}^{pure}}$ will be called {\em pure states}.\\
The elements of $\overline{ \mathfrak{E}}_{ \mathfrak{S}}$ will be called {\em real effects}.
\end{definition}

\begin{remark}
We note without proof the following basic results
\begin{eqnarray}
&& \forall { \mathfrak{l}},{ \mathfrak{l}}'\in \overline{ \mathfrak{E}}_{ \mathfrak{S}},\;\;\;\;\;\;\;\;\;\;\;\;(\, \forall \sigma\in { \mathfrak{S}},\; { \epsilon}^{ \mathfrak{S}}_{ { \mathfrak{l}}}(\sigma)={ \epsilon}^{ \mathfrak{S}}_{ { \mathfrak{l}}'}(\sigma) \,) \Leftrightarrow  (\, { \mathfrak{l}}= { \mathfrak{l}}' \,),\label{realChuextensional}\\
&&\forall L\subseteq \overline{ \mathfrak{E}}_{ \mathfrak{S}},\;\;\;\;( \bigsqcap{}^{{}^{\overline{ \mathfrak{E}}_{ \mathfrak{S}}}}L) \in \overline{ \mathfrak{E}}_{ \mathfrak{S}},\label{axiomreduc2}\\
&& \forall { \mathfrak{l}}\in \overline{ \mathfrak{E}}_{ \mathfrak{S}},\;\;\;\;\overline{\,{ \mathfrak{l}}\,} \in \overline{ \mathfrak{E}}_{ \mathfrak{S}},\label{axiomreduc3}\\
&&{ \mathfrak{Y}}_{{}_{{{ \mathfrak{E}}}_{ \mathfrak{S}}}} \in  \overline{ \mathfrak{E}}_{ \mathfrak{S}} \;\;\;\;\;\;\textit{\rm and} \;\;\;\;\;\; \bot_{{}_{{{ \mathfrak{E}}}_{ \mathfrak{S}}}} \in  \overline{ \mathfrak{E}}_{ \mathfrak{S}}.\label{axiomreduc4}
\end{eqnarray}
\end{remark}

\subsection{First results on real structures}\label{firstresultsrealstructures}

\begin{definition}
We will define a binary relation on ${}{\mathfrak{S}}\smallsetminus \{\bot_{{}_{{ \mathfrak{S}}}}\}$, denoted by $\underline{\perp}$ and called {\em orthogonality relation}, as follows :
\begin{eqnarray}
\hspace{-0.5cm}\forall \sigma,\sigma'\in {}{\mathfrak{S}}\smallsetminus \{\bot_{{}_{{ \mathfrak{S}}}}\},&& \sigma \underline{\perp} \sigma'\;\;:\Leftrightarrow\;\; \left( \exists \omega\in \overline{ \mathfrak{S}}\smallsetminus \{\bot_{{}_{\overline{ \mathfrak{S}}}}\},\; \sigma\sqsupseteq_{{}_{{ \mathfrak{S}}}}\omega \;\;\textit{\rm and}\;\; \sigma'\sqsupseteq_{{}_{{ \mathfrak{S}}}}\omega^\star\right).\label{definorthoS}\;\;\;\;\;\;\;\;\;\;\;\;\;\;
\end{eqnarray}
\end{definition}

%\begin{theorem}
%$({}{\mathfrak{S}},\underline{\perp})$ is an orthogonality space. In other words, the binary relation $\underline{\perp}$ is symmetric and irreflexive. 
%\end{theorem}
%\begin{proof}
%Trivial.  
%\end{proof}

\begin{definition}
We will adopt the following standard notation for the orthogonal of a subset of ${}{\mathfrak{S}}$
\begin{eqnarray}
\forall S\subseteq {}{\mathfrak{S}} & & S^{\underline{\perp}}:=\{\,\sigma\in {}{\mathfrak{S}} \;\vert\; \forall \sigma'\in S, \; \sigma \underline{\perp} \sigma'\,\}.\label{deforthosubset}
\end{eqnarray}
\end{definition}

\begin{lemma}
The operator defined on ${\mathcal{P}}({}{\mathfrak{S}})$ which maps $S$ to $(S^{\underline{\perp}})^{\underline{\perp}}$ (which will eventually be denoted by $S^{{\underline{\perp}}{\underline{\perp}}}$) owns the following properties
\begin{eqnarray}
\forall S\subseteq {}{\mathfrak{S}},&& S\subseteq S^{{\underline{\perp}}{\underline{\perp}}},\label{orthoorthoexpansive}\\
\forall S_1,S_2\subseteq {\mathfrak{S}},&& S_1\subseteq S_2\;\Rightarrow\; S_1^{{\underline{\perp}}{\underline{\perp}}}\subseteq S_2^{{\underline{\perp}}{\underline{\perp}}}.\;\;\;\;\;\;\;\label{orthoorthomonotone}
\end{eqnarray}
We have moreover
\begin{eqnarray}
\forall S\subseteq {}{\mathfrak{S}},&& ((S^{\underline{\perp}})^{\underline{\perp}})^{\underline{\perp}}=(S)^{\underline{\perp}}.\label{ortho3=ortho}
\end{eqnarray}
and then
\begin{eqnarray}
\forall S\subseteq {}{\mathfrak{S}},&& (S^{{\underline{\perp}}{\underline{\perp}}})^{{\underline{\perp}}{\underline{\perp}}}=S^{{\underline{\perp}}{\underline{\perp}}}.\label{ortho4=ortho2}
\end{eqnarray}
\end{lemma}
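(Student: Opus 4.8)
The statement collects the standard facts that $S \mapsto S^{\underline{\perp}\underline{\perp}}$ is a closure operator and that the orthogonality operator is a Galois connection with itself. The plan is to prove the four displayed properties in the natural logical order, deriving each from its predecessors, rather than attacking \eqref{ortho4=ortho2} directly.

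First I would prove \eqref{orthoorthoexpansive}. Let $S \subseteq {\mathfrak{S}}$ and $\sigma \in S$. To show $\sigma \in S^{\underline{\perp}\underline{\perp}} = (S^{\underline{\perp}})^{\underline{\perp}}$, I must check $\sigma \underline{\perp} \sigma'$ for every $\sigma' \in S^{\underline{\perp}}$. But by the definition \eqref{deforthosubset} of $S^{\underline{\perp}}$, any such $\sigma'$ satisfies $\sigma' \underline{\perp} \tau$ for all $\tau \in S$, and in particular $\sigma' \underline{\perp} \sigma$; since $\underline{\perp}$ is symmetric (immediate from \eqref{definorthoS}, which is symmetric in $\sigma,\sigma'$), we get $\sigma \underline{\perp} \sigma'$, as required. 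Next, \eqref{orthoorthomonotone}: the single-bar operator is inclusion-reversing, because $S_1 \subseteq S_2$ forces $S_2^{\underline{\perp}} \subseteq S_1^{\underline{\perp}}$ directly from \eqref{deforthosubset} (a state orthogonal to everything in $S_2$ is orthogonal to everything in the smaller set $S_1$); applying this reversal twice yields $S_1^{\underline{\perp}\underline{\perp}} \subseteq S_2^{\underline{\perp}\underline{\perp}}$.

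The core is \eqref{ortho3=ortho}, i.e. $((S^{\underline{\perp}})^{\underline{\perp}})^{\underline{\perp}} = S^{\underline{\perp}}$. For ``$\supseteq$'': apply \eqref{orthoorthoexpansive} with $S^{\underline{\perp}}$ in place of $S$, giving $S^{\underline{\perp}} \subseteq (S^{\underline{\perp}})^{\underline{\perp}\underline{\perp}}$. For ``$\subseteq$'': apply \eqref{orthoorthoexpansive} to $S$ itself, obtaining $S \subseteq S^{\underline{\perp}\underline{\perp}}$, and then apply the single-bar operator, which reverses inclusions, to get $(S^{\underline{\perp}\underline{\perp}})^{\underline{\perp}} \subseteq S^{\underline{\perp}}$; the left side is exactly $((S^{\underline{\perp}})^{\underline{\perp}})^{\underline{\perp}}$. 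Combining the two inclusions gives the equality. Finally, \eqref{ortho4=ortho2} is obtained by applying the single-bar operator once more to both sides of \eqref{ortho3=ortho}: from $((S^{\underline{\perp}})^{\underline{\perp}})^{\underline{\perp}} = S^{\underline{\perp}}$ we get $(((S^{\underline{\perp}})^{\underline{\perp}})^{\underline{\perp}})^{\underline{\perp}} = (S^{\underline{\perp}})^{\underline{\perp}}$, i.e. $(S^{\underline{\perp}\underline{\perp}})^{\underline{\perp}\underline{\perp}} = S^{\underline{\perp}\underline{\perp}}$.

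There is essentially no obstacle here: the only thing that needs to be verified from the concrete definition \eqref{definorthoS} is the symmetry of the relation $\underline{\perp}$ on ${\mathfrak{S}} \smallsetminus \{\bot_{{}_{{ \mathfrak{S}}}}\}$, which is visible because the existential witness condition ``$\exists \omega$ with $\sigma \sqsupseteq \omega$ and $\sigma' \sqsupseteq \omega^\star$'' becomes, after replacing $\omega$ by $\omega^\star$ and using the involutivity \eqref{involutive}, the same condition with $\sigma$ and $\sigma'$ interchanged. Everything else is the formal bookkeeping of a Galois connection, and the monotonicity/anti-monotonicity steps above are what make the three-fold and four-fold identities collapse. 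One minor point to state carefully is that the single-bar orthogonal of a subset of ${\mathfrak{S}}$ is again a subset of ${\mathfrak{S}}$ (here of ${\mathfrak{S}} \smallsetminus \{\bot_{{}_{{ \mathfrak{S}}}}\}$, on which $\underline{\perp}$ is defined), so that iterating the operator is legitimate.
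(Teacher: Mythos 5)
Your proof is correct and follows essentially the same route as the paper: expansiveness and anti-monotonicity of the single orthogonal are established from the definition (the paper calls these trivial), the identity $S^{\underline{\perp}\underline{\perp}\underline{\perp}}=S^{\underline{\perp}}$ is obtained exactly as the paper does — one inclusion from expansiveness applied to $S^{\underline{\perp}}$, the other by applying the order-reversing operator to expansiveness for $S$ — and the last identity follows formally. Your explicit check that $\underline{\perp}$ is symmetric (via involutivity of $\star$) is a detail the paper leaves implicit but is needed for the expansiveness claim.
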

\begin{proof}
The properties (\ref{orthoorthoexpansive}) and (\ref{orthoorthomonotone}) are obtained trivially. \\
Concerning property (\ref{ortho3=ortho}), we have the following analysis. $((B^{\underline{\perp}})^{\underline{\perp}})^{\underline{\perp}}\supseteq B^{\underline{\perp}}$ is a consequence of the fact that the property (\ref{orthoorthoexpansive}) for $B^{\underline{\perp}}$, and $((B^{\underline{\perp}})^{\underline{\perp}})^{\underline{\perp}}\subseteq B^{\underline{\perp}}$ is a consequence of the action of the order inversing property of $\underline{\perp}$ on the property (\ref{orthoorthoexpansive}) for $B$.\\
The property (\ref{ortho4=ortho2}) is a direct consequence of property (\ref{ortho3=ortho})
\end{proof}

\begin{definition}
A subset $S$ of ${}{\mathfrak{S}}$ will be said to be {\em orthoclosed} iff $(S^{\underline{\perp}})^{\underline{\perp}}=S$. The set of orthoclosed subsets of ${}{\mathfrak{S}}$ will be denoted ${ \mathfrak{H}}({}{\mathfrak{S}}, {\underline{\perp}})$.  
\end{definition}

\begin{lemma}\label{lemmavarnothing&E}
$\varnothing$ and $(\uparrow^{{}^{{}{\mathfrak{S}}}}\!\!\!\bot_{{}_{{\mathfrak{S}}}})$ are both elements of ${ \mathfrak{H}}({}{\mathfrak{S}}, {\underline{\perp}})$. 
\end{lemma}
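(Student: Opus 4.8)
The statement asserts two things: that $\varnothing$ is orthoclosed, and that $\uparrow^{{}^{{ \mathfrak{S}}}}\!\!\bot_{{}_{{ \mathfrak{S}}}}$ (which is all of ${ \mathfrak{S}}$, but here restricted as a subset of ${ \mathfrak{S}}\smallsetminus\{\bot_{{}_{{ \mathfrak{S}}}}\}$ in the bi-orthogonal computation) is orthoclosed. The plan is to compute the single and double orthogonals directly from Definition of $\underline{\perp}$ and of $S^{\underline{\perp}}$. The key observation, used throughout, is that by the axiom \eqref{starcomplement} no real state $\omega\neq\bot_{{}_{{ \mathfrak{S}}}}$ can have $\omega$ and $\omega^\star$ below a common element, which controls when the orthogonality relation can hold.

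\textbf{First part: $\varnothing^{\underline{\perp}\underline{\perp}}=\varnothing$.} I would first compute $\varnothing^{\underline{\perp}}$. By the definition \eqref{deforthosubset}, $\varnothing^{\underline{\perp}}=\{\sigma\in{ \mathfrak{S}}\;\vert\;\forall\sigma'\in\varnothing,\,\sigma\underline{\perp}\sigma'\}$; since the quantifier over the empty set is vacuously true, this is all of ${ \mathfrak{S}}\smallsetminus\{\bot_{{}_{{ \mathfrak{S}}}}\}$ on which $\underline{\perp}$ is defined — call it $Z$. Then I would compute $Z^{\underline{\perp}}=\{\sigma\in Z\;\vert\;\forall\sigma'\in Z,\,\sigma\underline{\perp}\sigma'\}$. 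The claim is that this is empty: given any $\sigma\in Z$, I must exhibit some $\sigma'\in Z$ with $\neg(\sigma\underline{\perp}\sigma')$, and the natural choice is $\sigma'=\sigma$ itself. So it suffices to show $\neg(\sigma\underline{\perp}\sigma)$ for every $\sigma\in Z$, i.e. there is no $\omega\in\overline{ \mathfrak{S}}\smallsetminus\{\bot_{{}_{{ \mathfrak{S}}}}\}$ with $\sigma\sqsupseteq_{{}_{{ \mathfrak{S}}}}\omega$ and $\sigma\sqsupseteq_{{}_{{ \mathfrak{S}}}}\omega^\star$; but such an $\omega$ would make $\sigma$ a common upper bound of $\omega$ and $\omega^\star$, contradicting \eqref{starcomplement}. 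Hence $\varnothing^{\underline{\perp}\underline{\perp}}=Z^{\underline{\perp}}=\varnothing$.

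\textbf{Second part: $(\uparrow^{{}^{{ \mathfrak{S}}}}\!\!\bot_{{}_{{ \mathfrak{S}}}})$ is orthoclosed.} Here $\uparrow^{{}^{{ \mathfrak{S}}}}\!\!\bot_{{}_{{ \mathfrak{S}}}}={ \mathfrak{S}}$, and viewed inside the orthogonality calculus its relevant part is again $Z={ \mathfrak{S}}\smallsetminus\{\bot_{{}_{{ \mathfrak{S}}}}\}$. By the computation just done, $Z^{\underline{\perp}}=\varnothing$, and then $Z^{\underline{\perp}\underline{\perp}}=\varnothing^{\underline{\perp}}=Z$ by the first step. So $Z$ is orthoclosed. (One should be slightly careful here about the bookkeeping of whether $\bot_{{}_{{ \mathfrak{S}}}}$ is in the set: since $\underline{\perp}$ is defined only on ${ \mathfrak{S}}\smallsetminus\{\bot_{{}_{{ \mathfrak{S}}}}\}$, the orthogonal of any subset is automatically contained in ${ \mathfrak{S}}\smallsetminus\{\bot_{{}_{{ \mathfrak{S}}}}\}$, and one checks that $\uparrow^{{}^{{ \mathfrak{S}}}}\!\!\bot_{{}_{{ \mathfrak{S}}}}$ means the ambient set $Z$ in this context.)

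\textbf{Main obstacle.} There is no real analytic difficulty; the only subtlety is the domain bookkeeping — being consistent about whether $\bot_{{}_{{ \mathfrak{S}}}}$ belongs to the sets under consideration — and the one genuine input is invoking \eqref{starcomplement} at exactly the right place to kill $Z^{\underline{\perp}}$. I would state this irreflexivity $\neg(\sigma\underline{\perp}\sigma)$ as a small preliminary observation, since it is the crux and will likely be reused.
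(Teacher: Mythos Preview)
Your proof is correct and follows the natural direct approach the paper has in mind (the paper states this lemma without proof, treating it as immediate). Your key step---the irreflexivity $\neg(\sigma\underline{\perp}\sigma)$, derived from \eqref{starcomplement}---is exactly the right input, and your identification of the only subtlety as the $\bot_{{}_{{ \mathfrak{S}}}}$ bookkeeping is apt: note that the paper's later Lemma~\ref{rangleHsubsetH} records $\varnothing^{\underline{\perp}}=(\uparrow^{{}^{{\mathfrak{S}}}}\!\!\bot_{{}_{{ \mathfrak{S}}}})$ rather than ${\mathfrak{S}}\smallsetminus\{\bot_{{}_{{ \mathfrak{S}}}}\}$, so the paper's convention does include $\bot_{{}_{{ \mathfrak{S}}}}$ in the vacuous orthogonal, but this does not affect the argument.
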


\begin{lemma}\label{lemmaorthocapsemi-lattice}
Equipped with the inclusion $\subseteq$, ${ \mathfrak{H}}({}{\mathfrak{S}}, {\underline{\perp}})$ is a partially ordered set. With the intersection $\cap$, ${ \mathfrak{H}}({}{\mathfrak{S}}, {\underline{\perp}})$ is a down complete Inf semi-lattice.
\end{lemma}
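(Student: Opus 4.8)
The plan is to verify the two assertions separately, using only formal properties of the closure operator $S \mapsto S^{\underline{\perp}\underline{\perp}}$ established in the preceding lemma, together with the elementary behaviour of $(-)^{\underline{\perp}}$ with respect to arbitrary intersections. First I would observe that the partial order claim is immediate: inclusion $\subseteq$ is always a partial order on any family of subsets of a fixed set, so ${ \mathfrak{H}}({}{\mathfrak{S}}, {\underline{\perp}})$, being a subfamily of ${\mathcal{P}}({}{\mathfrak{S}})$, inherits reflexivity, antisymmetry, and transitivity without further argument.

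For the semi-lattice structure, the key step is to show that ${ \mathfrak{H}}({}{\mathfrak{S}}, {\underline{\perp}})$ is closed under arbitrary intersections and that such an intersection is the infimum in the poset $({ \mathfrak{H}}({}{\mathfrak{S}}, {\underline{\perp}}), \subseteq)$. So let $\{\,S_i \mid i \in I\,\}$ be a family of orthoclosed subsets, and set $T := \bigcap_{i \in I} S_i$. I would first record the general fact that $(-)^{\underline{\perp}}$ is order-reversing — which is immediate from the definition (\ref{deforthosubset}) since enlarging $S$ can only shrink the set of elements orthogonal to every member of $S$ — and hence $(-)^{\underline{\perp}\underline{\perp}}$ is monotone (this is (\ref{orthoorthomonotone})) and $(-)^{\underline{\perp}\underline{\perp}}$ is expansive (this is (\ref{orthoorthoexpansive})). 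From expansiveness, $T \subseteq T^{\underline{\perp}\underline{\perp}}$. For the reverse inclusion, note that for each $i$ we have $T \subseteq S_i$, so by monotonicity $T^{\underline{\perp}\underline{\perp}} \subseteq S_i^{\underline{\perp}\underline{\perp}} = S_i$, where the last equality holds because $S_i$ is orthoclosed; taking the intersection over $i$ gives $T^{\underline{\perp}\underline{\perp}} \subseteq \bigcap_{i \in I} S_i = T$. Therefore $T^{\underline{\perp}\underline{\perp}} = T$, i.e. $T \in { \mathfrak{H}}({}{\mathfrak{S}}, {\underline{\perp}})$.

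It remains to check that $T = \bigcap_{i\in I} S_i$ is the greatest lower bound of the family inside ${ \mathfrak{H}}({}{\mathfrak{S}}, {\underline{\perp}})$. It is certainly a lower bound, since $T \subseteq S_i$ for each $i$. If $R \in { \mathfrak{H}}({}{\mathfrak{S}}, {\underline{\perp}})$ is any other lower bound, then $R \subseteq S_i$ for all $i$, hence $R \subseteq \bigcap_{i\in I} S_i = T$; so $T$ dominates every lower bound, and is the infimum. Since intersections of arbitrary (in particular, all) subfamilies exist in ${ \mathfrak{H}}({}{\mathfrak{S}}, {\underline{\perp}})$, the poset is a down-complete Inf semi-lattice, which is the assertion. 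I do not expect a genuine obstacle here; the only point requiring a little care is making sure the infimum computed in ${ \mathfrak{H}}({}{\mathfrak{S}}, {\underline{\perp}})$ coincides with the ambient set-theoretic intersection rather than with its closure — but the computation above shows the intersection of orthoclosed sets is already orthoclosed, so no additional closure step is needed.
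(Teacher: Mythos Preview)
Your proof is correct and follows essentially the same approach as the paper: both use monotonicity of $(-)^{\underline{\perp}\underline{\perp}}$ to get $T^{\underline{\perp}\underline{\perp}} \subseteq S_i$ for each $i$ (hence $T^{\underline{\perp}\underline{\perp}} \subseteq T$) and expansiveness for the reverse inclusion. The only minor difference is that the paper treats the binary case $H \cap H'$ while you handle arbitrary families directly, and you make the verification that the intersection is the infimum explicit; neither change is substantive.
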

\begin{proof}
The first assertion is trivial. Concerning the second assertion, let us consider two orthoclosed subsets $H,H'\in { \mathfrak{H}}({}{\mathfrak{S}}, {\underline{\perp}})$. The properties $H\cap H'\subseteq H,H'$ imply $(H\cap H')^{\underline{\perp}\underline{\perp}}\subseteq (H)^{\underline{\perp}\underline{\perp}}=H,(H')^{\underline{\perp}\underline{\perp}}=H'$ (due to the monotonicity of the orthogonal closure) and then $(H\cap H')^{\underline{\perp}\underline{\perp}}\subseteq (H \cap H')$. Using the first axiom of closure, we have also $(H\cap H')^{\underline{\perp}\underline{\perp}}\supseteq (H \cap H')$. This concludes the proof.
\end{proof}

\begin{lemma}
As a consequence of property (\ref{ortho3=ortho}), we have that,  for any $S$ orthoclosed subset of ${}{\mathfrak{S}}$, there exists a subset $S'$ of ${}{\mathfrak{S}}$ such that $(S')^{\underline{\perp}}=S$, and conversely, any subset of the form $S^{\underline{\perp}}$ is orthoclosed.\\
As a conclusion, the image of ${ \mathcal{P}}({}{\mathfrak{S}})$ by the orthogonality operator is exactly ${ \mathfrak{H}}({}{\mathfrak{S}}, {\underline{\perp}})$.
\end{lemma}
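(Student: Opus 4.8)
The plan is to read everything off the idempotence-type identities (\ref{ortho3=ortho}) and (\ref{ortho4=ortho2}) already established for the double-orthogonal operator; there is no genuine obstacle, only a matter of assembling the two inclusions.

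First I would prove the existence of the witness $S'$: given $S$ orthoclosed, i.e. $(S^{\underline{\perp}})^{\underline{\perp}}=S$, simply set $S':=S^{\underline{\perp}}$. Then $(S')^{\underline{\perp}}=(S^{\underline{\perp}})^{\underline{\perp}}=S^{{\underline{\perp}}{\underline{\perp}}}=S$, which is exactly the first claim. For the converse, take an arbitrary $S\subseteq {}{\mathfrak{S}}$ and apply (\ref{ortho3=ortho}) to $S$ itself: $((S^{\underline{\perp}})^{\underline{\perp}})^{\underline{\perp}}=S^{\underline{\perp}}$, that is $(S^{\underline{\perp}})^{{\underline{\perp}}{\underline{\perp}}}=S^{\underline{\perp}}$, so $S^{\underline{\perp}}$ satisfies the defining equation of an orthoclosed subset and thus lies in ${ \mathfrak{H}}({}{\mathfrak{S}}, {\underline{\perp}})$. (Equivalently, one may invoke (\ref{ortho4=ortho2}) after writing $S^{\underline{\perp}}$ in the form $T^{{\underline{\perp}}{\underline{\perp}}}$, but applying (\ref{ortho3=ortho}) directly is shortest.)

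Finally, for the concluding statement about the image of ${ \mathcal{P}}({}{\mathfrak{S}})$ under the orthogonality operator, I would combine the two halves. The second half says that for every $S\subseteq {}{\mathfrak{S}}$ the set $S^{\underline{\perp}}$ is orthoclosed, hence the image is contained in ${ \mathfrak{H}}({}{\mathfrak{S}}, {\underline{\perp}})$. The first half says that every orthoclosed $H$ equals $(H^{\underline{\perp}})^{\underline{\perp}}$, so $H$ is the image of $H^{\underline{\perp}}\in { \mathcal{P}}({}{\mathfrak{S}})$; hence ${ \mathfrak{H}}({}{\mathfrak{S}}, {\underline{\perp}})$ is contained in the image. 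The two inclusions give the claimed equality, completing the argument.
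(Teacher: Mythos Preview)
Your proof is correct and is exactly the intended argument: the paper does not spell out a proof for this lemma, but its phrasing ``As a consequence of property (\ref{ortho3=ortho})'' indicates precisely the derivation you give, taking $S':=S^{\underline{\perp}}$ for the first direction and applying (\ref{ortho3=ortho}) directly for the second.
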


\begin{theorem}\label{orthocomplem0}
The set of orthoclosed subsets satisfies the following properties :
\begin{eqnarray}
\forall H\in { \mathfrak{H}}({}{\mathfrak{S}}, {\underline{\perp}}),&& (H^{\underline{\perp}})^{\underline{\perp}}=H,\\
\forall H_1,H_2\in { \mathfrak{H}}({}{\mathfrak{S}}, {\underline{\perp}}),&& (H_1\subseteq H_2)\;\Rightarrow \; (H_2^{\underline{\perp}} \subseteq H_1^{\underline{\perp}}).
\end{eqnarray}
\end{theorem}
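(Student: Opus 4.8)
The plan is to observe that both assertions are immediate unwindings of the definitions, with no genuine new content beyond what the preceding lemmas already record. For the first equality, I would simply invoke the definition of an orthoclosed subset: by definition, $H\in { \mathfrak{H}}({}{\mathfrak{S}}, {\underline{\perp}})$ means precisely that $(H^{\underline{\perp}})^{\underline{\perp}}=H$, so there is literally nothing to prove. If one prefers to make the link with (\ref{ortho3=ortho}) explicit, one instead argues: by the lemma identifying the image of ${ \mathcal{P}}({}{\mathfrak{S}})$ under the orthogonality operator, every orthoclosed $H$ is of the form $S^{\underline{\perp}}$ for some $S\subseteq {}{\mathfrak{S}}$, and then $(H^{\underline{\perp}})^{\underline{\perp}}=((S^{\underline{\perp}})^{\underline{\perp}})^{\underline{\perp}}=S^{\underline{\perp}}=H$ by (\ref{ortho3=ortho}).

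For the second implication, I would argue directly from the definition (\ref{deforthosubset}) of the orthogonal of a subset. Assume $H_1\subseteq H_2$ and pick $\sigma\in H_2^{\underline{\perp}}$; then $\sigma \underline{\perp}\sigma'$ holds for every $\sigma'\in H_2$, hence a fortiori for every $\sigma'\in H_1$, so $\sigma\in H_1^{\underline{\perp}}$. This yields $H_2^{\underline{\perp}}\subseteq H_1^{\underline{\perp}}$. Note that this step uses nothing about $H_1,H_2$ being orthoclosed — it is the contravariance of $(-)^{\underline{\perp}}$ on arbitrary subsets — so the hypothesis $H_1,H_2\in { \mathfrak{H}}({}{\mathfrak{S}}, {\underline{\perp}})$ is present only to phrase the statement internally to the poset $({ \mathfrak{H}}({}{\mathfrak{S}}, {\underline{\perp}}),\subseteq)$ introduced in Lemma \ref{lemmaorthocapsemi-lattice}.

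As for difficulty: there is essentially no obstacle. The only genuine choice is presentational — whether to read off the first equality straight from the definition of orthoclosedness or to route it through (\ref{ortho3=ortho}) — and I would take the shortest path and cite the definition. The theorem is really being stated here for later reference, packaging the two facts (involutivity on orthoclosed sets, contravariance) needed to treat $({ \mathfrak{H}}({}{\mathfrak{S}}, {\underline{\perp}}),\subseteq)$ as an ortho-poset, rather than because its proof carries any weight.
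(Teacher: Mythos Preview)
Your proposal is correct and matches the paper's approach: the paper derives the second relation ``trivially from Definition \ref{deforthosubset}'' (your direct contravariance argument) and cites (\ref{ortho4=ortho2}) for the first, which amounts to reading off the definition of orthoclosedness as you do. Your observation that the first equality is literally the definition of $H\in { \mathfrak{H}}({}{\mathfrak{S}}, {\underline{\perp}})$ is in fact slightly more direct than the paper's citation of (\ref{ortho4=ortho2}), but the content is identical.
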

\begin{proof}
The second relation is trivially deduced from Definition \ref{deforthosubset} of the orthogonal of a subset. The first relation is (\ref{ortho4=ortho2}).
\end{proof}

\begin{definition}
Using the results of the previous lemma, it appears natural to introduce the following orthogonality relation on the set of orthoclosed subsets ${ \mathfrak{H}}({\mathfrak{S}}, {\underline{\perp}})$
\begin{eqnarray}
\forall H_1,H_2\in { \mathfrak{H}}({}{\mathfrak{S}}, {\underline{\perp}}),\;\; H_1\;\underline{\perp}\; H_2 & :\Leftrightarrow & H_1^{\underline{\perp}} \supseteq H_2.\label{deforthoHSorthog}
\end{eqnarray} 
We use the same notation as the orthogonality relation of ${}{\mathfrak{S}}$ because ${}{\mathfrak{S}}$ is naturally injected into ${ \mathfrak{H}}({}{\mathfrak{S}}, {\underline{\perp}})$.\\
This orthogonality relation is obviously symmetric and irreflexive. 
\end{definition}

\begin{definition}
We will adopt the following notation for the suprema :
\begin{eqnarray}
\forall H_1,H_2\in { \mathfrak{H}}({}{\mathfrak{S}}, {\underline{\perp}}),&& H_1 \curlyvee H_2 := (H_1^{\underline{\perp}} \cap H_2^{\underline{\perp}})^{\underline{\perp}}.\label{demorgan}
\end{eqnarray}
\end{definition}

\begin{definition}
During this subsection, the map $\Theta^{\overline{ \mathfrak{S}}}$ from ${ \mathfrak{S}}$ to ${\mathcal{P}}(\overline{ \mathfrak{S}})$ that will be used extensively, is defined by $\Theta^{\overline{ \mathfrak{S}}}(\zeta):=Max((\downarrow_{{}_{{ \mathfrak{S}}}}\zeta)\cap \overline{ \mathfrak{S}})$. For a detailed analysis of the properties of $\Theta^{\overline{{ \mathfrak{S}}}}$, see subsection \ref{subsectionpreliminaryhidden}.
\end{definition}

\begin{lemma}\label{lemmaSorthogonal}
\begin{eqnarray}
\forall \sigma\in {}{\mathfrak{S}},&& \{\sigma\}^{\underline{\perp}} = \bigcup{}_{\omega\in \Theta^{\overline{{ \mathfrak{S}}}}(\sigma)}(\uparrow^{{}^{{}{\mathfrak{S}}}}\!\!\omega^\star),\\
\forall S\subseteq {}{\mathfrak{S}},&& S^{\underline{\perp}} = \bigcap{}_{\sigma\in S}\;\{\sigma\}^{\underline{\perp}}.
\end{eqnarray}
\end{lemma}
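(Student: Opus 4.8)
The second identity is pure book‑keeping: unfolding Definition~(\ref{deforthosubset}), $\tau\in S^{\underline{\perp}}$ means $\tau\,\underline{\perp}\,\sigma$ for every $\sigma\in S$, i.e. $\tau\in\{\sigma\}^{\underline{\perp}}$ for every $\sigma\in S$, i.e. $\tau\in\bigcap_{\sigma\in S}\{\sigma\}^{\underline{\perp}}$; no structure of $\overline{ \mathfrak{S}}$ enters, so I would dispatch it in one line.

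For the first identity, the plan is first to rewrite the orthogonality relation (\ref{definorthoS}): using its symmetry, $\tau\in\{\sigma\}^{\underline{\perp}}$ holds iff there is $\omega\in\overline{ \mathfrak{S}}\smallsetminus\{\bot_{{}_{ \mathfrak{S}}}\}$ with $\omega\sqsubseteq_{{}_{ \mathfrak{S}}}\sigma$ and $\omega^{\star}\sqsubseteq_{{}_{ \mathfrak{S}}}\tau$, equivalently iff $\tau\in\uparrow^{{}^{ \mathfrak{S}}}\!\omega^{\star}$ for some $\omega\in\bigl((\downarrow_{{}_{ \mathfrak{S}}}\sigma)\cap\overline{ \mathfrak{S}}\bigr)\smallsetminus\{\bot_{{}_{ \mathfrak{S}}}\}$. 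With this reformulation the inclusion ``$\supseteq$'' is immediate, since $\Theta^{\overline{ \mathfrak{S}}}(\sigma)=Max\bigl((\downarrow_{{}_{ \mathfrak{S}}}\sigma)\cap\overline{ \mathfrak{S}}\bigr)$ is contained in $(\downarrow_{{}_{ \mathfrak{S}}}\sigma)\cap\overline{ \mathfrak{S}}$ and, for $\sigma\neq\bot_{{}_{ \mathfrak{S}}}$, consists of nonzero states: indeed, by the real‑separation property (\ref{realChuseparated}) every state other than $\bot_{{}_{ \mathfrak{S}}}$ dominates a nonzero real state, so $\bot_{{}_{ \mathfrak{S}}}$ is never maximal in $(\downarrow_{{}_{ \mathfrak{S}}}\sigma)\cap\overline{ \mathfrak{S}}$ (the case $\sigma=\bot_{{}_{ \mathfrak{S}}}$, for which $\{\bot_{{}_{ \mathfrak{S}}}\}^{\underline{\perp}}=\varnothing$, being degenerate and treated apart). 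For ``$\subseteq$'', given $\tau\in\{\sigma\}^{\underline{\perp}}$ with a witness $\omega_{0}\in\bigl((\downarrow_{{}_{ \mathfrak{S}}}\sigma)\cap\overline{ \mathfrak{S}}\bigr)\smallsetminus\{\bot_{{}_{ \mathfrak{S}}}\}$, $\omega_{0}^{\star}\sqsubseteq_{{}_{ \mathfrak{S}}}\tau$, the point is to promote $\omega_{0}$ to a maximal real state below $\sigma$: if $\omega\in\Theta^{\overline{ \mathfrak{S}}}(\sigma)$ satisfies $\omega\sqsupseteq_{{}_{ \mathfrak{S}}}\omega_{0}$, then the order‑reversing property (\ref{orderreversing}) of $\star$ yields $\omega^{\star}\sqsubseteq_{{}_{ \mathfrak{S}}}\omega_{0}^{\star}\sqsubseteq_{{}_{ \mathfrak{S}}}\tau$, so $\tau\in\uparrow^{{}^{ \mathfrak{S}}}\!\omega^{\star}$ and $\tau$ belongs to the right‑hand union. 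Hence the whole statement reduces to the \emph{domination property} for $\Theta^{\overline{ \mathfrak{S}}}$: every real state dominated by $\sigma$ is itself dominated by some element of $\Theta^{\overline{ \mathfrak{S}}}(\sigma)$.

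This domination property is the part where real work is needed, and I would borrow it from the analysis of $\Theta^{\overline{ \mathfrak{S}}}$ carried out in subsection~\ref{subsectionpreliminaryhidden}. When $\sigma$ is itself a real state the argument is a clean application of Zorn's lemma: a chain $C$ in $(\downarrow_{{}_{ \mathfrak{S}}}\sigma)\cap\overline{ \mathfrak{S}}$ admits, inside that very set, the upper bound $\bigsqcap^{{}^{ \mathfrak{S}}}U$, where $U$ is the set of real states dominating every element of $C$ — $U$ is nonempty because $\sigma\in U$, $\bigsqcap^{{}^{ \mathfrak{S}}}U\in\overline{ \mathfrak{S}}$ by closure of $\overline{ \mathfrak{S}}$ under $ \mathfrak{S}$‑infima, and $\bigsqcap^{{}^{ \mathfrak{S}}}U\sqsubseteq_{{}_{ \mathfrak{S}}}\sigma$ since $\sigma\in U$ — so Zorn produces a maximal element above any prescribed one. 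For a hidden $\sigma$ such upper bounds need not lie inside $\overline{ \mathfrak{S}}$ (the supremum in $ \mathfrak{S}$ of a chain of real states is in general again a hidden state), so the argument cannot be run naively; instead one must exploit the generation of $\overline{ \mathfrak{S}}$ by its maximal elements (property (\ref{completemeetirreducibleordergeneratingS})) together with the description of how hidden states are obtained from the real ones, which is precisely the content developed in subsection~\ref{subsectionpreliminaryhidden}. I expect this to be the main obstacle; everything else is unwinding of definitions.
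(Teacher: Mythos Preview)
Your proposal is correct and matches the paper's approach: the paper marks this lemma as ``Trivial'' while forward-referencing subsection~\ref{subsectionpreliminaryhidden} for the properties of $\Theta^{\overline{\mathfrak{S}}}$, and the domination property you isolate is exactly what that subsection supplies (chain-completeness of $(\downarrow_{{}_{\mathfrak{S}}}\sigma)\cap\overline{\mathfrak{S}}$ plus Zorn). Your separate treatment of real versus hidden $\sigma$ is a slight detour---the chain-completeness argument there is stated uniformly in $\sigma$, so no case split is needed---but your plan to borrow the result from subsection~\ref{subsectionpreliminaryhidden} is precisely what the paper intends by its forward reference.
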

\begin{proof}
Trivial.   
\end{proof}

\begin{theorem}\label{expressionorthogonal}
For any $S\subseteq {}{\mathfrak{S}}$, we define $U_S$ as the set of maps from $S$ to ${\overline{{ \mathfrak{S}}}}$ mapping any $\sigma\in S$ to an element of $\Theta^{\overline{{ \mathfrak{S}}}}(\sigma)$.  In other words,
\begin{eqnarray}
U_S &:=& \{\,\phi\in {\overline{{ \mathfrak{S}}}}^S \;\vert\; \forall \sigma\in S,\;\phi(\sigma)\in  \Theta^{\overline{{ \mathfrak{S}}}}(\sigma)\,\}.\label{defUS}
\end{eqnarray}
We then define, for any $S\subseteq {}{\mathfrak{S}}$,  an element $V_S$ in $\widecheck{ \mathfrak{S}}\cup \{\centerdot\}$ by
\begin{eqnarray}
V_S &:=& \{\,\bigsqcup{}^{{}^{{ \mathfrak{S}}}}_{\sigma\in S} (\phi(\sigma))^\star\;\vert\; \phi\in U_S\,\}.
\end{eqnarray}
Endly, we have
\begin{eqnarray}
S^{\underline{\perp}} &=& \bigcup{}_{\alpha\in V_S}(\uparrow^{{}^{{}{\mathfrak{S}}}}\!\!\alpha).\label{expressionSortho}
\end{eqnarray}
with the obvious convention $(\uparrow^{{}^{{}{\mathfrak{S}}}}\!\!\centerdot) :=\varnothing$.
\end{theorem}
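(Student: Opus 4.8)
The plan is to prove the expression for $S^{\underline{\perp}}$ by combining the two formulas from Lemma~\ref{lemmaSorthogonal} and then manipulating the resulting double intersection/union into the stated form. First I would write
\begin{eqnarray}
S^{\underline{\perp}} &=& \bigcap{}_{\sigma\in S}\;\{\sigma\}^{\underline{\perp}} = \bigcap{}_{\sigma\in S}\;\bigcup{}_{\omega\in \Theta^{\overline{{ \mathfrak{S}}}}(\sigma)}(\uparrow^{{}^{{}{\mathfrak{S}}}}\!\!\omega^\star).\nonumber
\end{eqnarray}
The key move is to distribute the intersection over the unions. A point $\zeta\in {}{\mathfrak{S}}$ lies in this set if and only if, for every $\sigma\in S$, there exists some $\omega_\sigma\in \Theta^{\overline{{ \mathfrak{S}}}}(\sigma)$ with $\zeta\sqsupseteq_{{}_{{ \mathfrak{S}}}}\omega_\sigma^\star$; that is, if and only if there is a choice function $\phi\in U_S$ (in the notation of~(\ref{defUS})) such that $\zeta$ is a common upper bound of the family $\{\,(\phi(\sigma))^\star\;\vert\;\sigma\in S\,\}$.

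Next I would observe that $\zeta$ is a common upper bound of $\{\,(\phi(\sigma))^\star\;\vert\;\sigma\in S\,\}$ precisely when the supremum $\bigsqcup{}^{{}^{{ \mathfrak{S}}}}_{\sigma\in S}(\phi(\sigma))^\star$ exists in ${ \mathfrak{S}}$ and $\zeta$ lies above it; if no common upper bound exists, the corresponding contribution is empty, which is exactly what the convention $(\uparrow^{{}^{{}{\mathfrak{S}}}}\!\!\centerdot):=\varnothing$ encodes (here $\bigsqcup{}^{{}^{{ \mathfrak{S}}}}_{\sigma\in S}(\phi(\sigma))^\star = \centerdot$ when the family has no upper bound, consistently with the statement placing $V_S$ in $\widecheck{ \mathfrak{S}}\cup\{\centerdot\}$). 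Collecting over all $\phi\in U_S$ and recalling $V_S = \{\,\bigsqcup{}^{{}^{{ \mathfrak{S}}}}_{\sigma\in S}(\phi(\sigma))^\star\;\vert\;\phi\in U_S\,\}$, one gets
\begin{eqnarray}
S^{\underline{\perp}} = \bigcup{}_{\phi\in U_S}(\uparrow^{{}^{{}{\mathfrak{S}}}}\!\!\bigsqcup{}^{{}^{{ \mathfrak{S}}}}_{\sigma\in S}(\phi(\sigma))^\star) = \bigcup{}_{\alpha\in V_S}(\uparrow^{{}^{{}{\mathfrak{S}}}}\!\!\alpha),\nonumber
\end{eqnarray}
which is~(\ref{expressionSortho}).

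The main obstacle I anticipate is not the set-theoretic distribution of $\bigcap$ over $\bigcup$ — which is routine — but rather being careful about the edge cases: when $S=\varnothing$ (then $U_S$ has one element, the empty map, and $V_S$ should yield all of ${}{\mathfrak{S}}$, i.e.\ $\uparrow^{{}^{{}{\mathfrak{S}}}}\!\!\bot_{{}_{{\mathfrak{S}}}}$, since the empty supremum is $\bot_{{}_{{\mathfrak{S}}}}$), when some $\Theta^{\overline{{ \mathfrak{S}}}}(\sigma)$ is empty (forcing $U_S=\varnothing$, hence $V_S=\varnothing$ and $S^{\underline{\perp}}=\varnothing$), and when the family $\{(\phi(\sigma))^\star\}$ fails to have a common upper bound. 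These cases must be reconciled with the conventions attached to $\centerdot$ and with the behaviour of infinite suprema $\bigsqcup{}^{{}^{{ \mathfrak{S}}}}$ in a mere down-complete Inf semi-lattice (where suprema need not exist), but once the bookkeeping is set up as above each case checks out directly.
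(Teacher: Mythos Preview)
Your proposal is correct and follows essentially the same approach as the paper, which simply records the result as a ``Direct consequence of Lemma~\ref{lemmaSorthogonal}.'' You have spelled out in detail the standard distribution of the intersection over the unions via choice functions and the bookkeeping for the $\centerdot$ convention, which is exactly what that one-line proof is pointing at.
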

\begin{proof}
Direct consequence of Lemma \ref{lemmaSorthogonal}.
\end{proof}

\begin{lemma}
For any $S\subseteq {}{\mathfrak{S}}$, we define $U_S$ as in \ref{defUS}, and we have
\begin{eqnarray}
(S^{\underline{\perp}})^{\underline{\perp}} &=& \bigcap{}_{\phi\in U_S}\left( \bigcup{}_{\rho\in \Theta^{\overline{{ \mathfrak{S}}}}\left(\bigsqcup{}^{{}^{{}{\mathfrak{S}}}}_{\sigma\in S} (\phi(\sigma))^\star\right)}(\uparrow^{{}^{{}{\mathfrak{S}}}}\!\!\rho^\star)\right).\label{expressionSorthoortho}
\end{eqnarray}
\end{lemma}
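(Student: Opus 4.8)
The plan is to iterate Theorem \ref{expressionorthogonal} once: formula (\ref{expressionSortho}) already writes $S^{\underline{\perp}} = \bigcup_{\alpha\in V_S}(\uparrow^{{}^{{ \mathfrak{S}}}}\!\!\alpha)$, so it remains only to compute the orthogonal of this union and re-express the result through $U_S$. Since the orthogonality operator sends an arbitrary union of subsets to the intersection of their orthogonals --- a restatement of the second identity of Lemma \ref{lemmaSorthogonal} --- the first step gives $(S^{\underline{\perp}})^{\underline{\perp}} = \bigcap_{\alpha\in V_S}(\uparrow^{{}^{{ \mathfrak{S}}}}\!\!\alpha)^{\underline{\perp}}$.

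The one substantive observation is that $(\uparrow^{{}^{{ \mathfrak{S}}}}\!\!\alpha)^{\underline{\perp}} = \{\alpha\}^{\underline{\perp}}$ for each $\alpha\in { \mathfrak{S}}$. The inclusion $\subseteq$ is immediate since $\alpha\in\uparrow^{{}^{{ \mathfrak{S}}}}\!\!\alpha$; for $\supseteq$ I use that the relation $\underline{\perp}$ of (\ref{definorthoS}) is upward closed in its second argument: if $\sigma\underline{\perp}\alpha$ is witnessed by $\omega\in\overline{ \mathfrak{S}}\smallsetminus\{\bot_{{}_{ \mathfrak{S}}}\}$ with $\sigma\sqsupseteq_{{}_{{ \mathfrak{S}}}}\omega$ and $\alpha\sqsupseteq_{{}_{{ \mathfrak{S}}}}\omega^\star$, then every $\beta\sqsupseteq_{{}_{{ \mathfrak{S}}}}\alpha$ also satisfies $\beta\sqsupseteq_{{}_{{ \mathfrak{S}}}}\omega^\star$, hence $\sigma\underline{\perp}\beta$; so $\{\alpha\}^{\underline{\perp}}\subseteq\{\beta\}^{\underline{\perp}}$ for all $\beta\in\uparrow^{{}^{{ \mathfrak{S}}}}\!\!\alpha$, and intersecting over such $\beta$ (again by the second identity of Lemma \ref{lemmaSorthogonal}, with $S=\uparrow^{{}^{{ \mathfrak{S}}}}\!\!\alpha$) yields $\{\alpha\}^{\underline{\perp}}\subseteq(\uparrow^{{}^{{ \mathfrak{S}}}}\!\!\alpha)^{\underline{\perp}}$. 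Substituting, then applying the first identity of Lemma \ref{lemmaSorthogonal}, i.e. $\{\alpha\}^{\underline{\perp}} = \bigcup_{\rho\in\Theta^{\overline{{ \mathfrak{S}}}}(\alpha)}(\uparrow^{{}^{{ \mathfrak{S}}}}\!\!\rho^\star)$, gives $(S^{\underline{\perp}})^{\underline{\perp}} = \bigcap_{\alpha\in V_S}\bigcup_{\rho\in\Theta^{\overline{{ \mathfrak{S}}}}(\alpha)}(\uparrow^{{}^{{ \mathfrak{S}}}}\!\!\rho^\star)$.

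Finally I re-index. By its definition in Theorem \ref{expressionorthogonal}, $V_S$ is precisely the image of $U_S$ under $\phi\mapsto\bigsqcup{}^{{}^{{ \mathfrak{S}}}}_{\sigma\in S}(\phi(\sigma))^\star$; surjectivity of this map onto $V_S$ is all that is needed to replace $\bigcap_{\alpha\in V_S}(\cdots)$ by $\bigcap_{\phi\in U_S}(\cdots)$ with $\alpha = \bigsqcup{}^{{}^{{ \mathfrak{S}}}}_{\sigma\in S}(\phi(\sigma))^\star$ (injectivity is irrelevant, since repeating terms does not change an intersection). This is exactly the asserted identity (\ref{expressionSorthoortho}).

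The argument is essentially bookkeeping --- the analogous preceding lemma is flagged "Trivial" --- and the only point deserving care is the degenerate case in which, for some $\phi\in U_S$, the supremum $\bigsqcup{}^{{}^{{ \mathfrak{S}}}}_{\sigma\in S}(\phi(\sigma))^\star$ does not exist, i.e. equals $\centerdot$: then $\uparrow^{{}^{{ \mathfrak{S}}}}\!\!\centerdot = \varnothing$, whose orthogonal is the whole ambient space $\uparrow^{{}^{{ \mathfrak{S}}}}\!\!\bot_{{}_{ \mathfrak{S}}}$, so that value of $\phi$ contributes nothing to the intersection --- which is exactly what the conventions already in force encode ($\uparrow^{{}^{{ \mathfrak{S}}}}\!\!\centerdot := \varnothing$, and the corresponding term $\bigcup_{\rho\in\Theta^{\overline{{ \mathfrak{S}}}}(\centerdot)}(\uparrow^{{}^{{ \mathfrak{S}}}}\!\!\rho^\star)$ read as the full space). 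The case $S=\varnothing$ is subsumed here, with $U_S$ the singleton containing the empty map and $\bigsqcup{}^{{}^{{ \mathfrak{S}}}}_{\sigma\in\varnothing}(\cdots) = \bot_{{}_{ \mathfrak{S}}}$.
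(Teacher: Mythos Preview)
Your proof is correct and is exactly the natural spelling-out of the paper's one-line proof, which reads ``Direct consequence of Theorem \ref{expressionorthogonal} and Lemma \ref{lemmaSorthogonal}.'' You have made explicit the one step the paper leaves implicit, namely that $(\uparrow^{{}^{\mathfrak{S}}}\!\!\alpha)^{\underline{\perp}}=\{\alpha\}^{\underline{\perp}}$ via the upward closure of $\underline{\perp}$ in its second argument; your handling of the degenerate $\centerdot$ case and the re-indexing from $V_S$ to $U_S$ are also sound.
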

\begin{proof}
Direct consequence of Theorem \ref{expressionorthogonal} and Lemma \ref{lemmaSorthogonal}.
\end{proof}

\begin{theorem}\label{theoremorthocomplementation}
We have the following ortho-complementation property :
\begin{eqnarray}
\forall H\in { \mathfrak{H}}({\mathfrak{S}}, {\underline{\perp}}),&&\left(H^{\underline{\perp}} \cap H  \right)=\varnothing.
\end{eqnarray}
\end{theorem}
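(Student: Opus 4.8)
The plan is to reduce the statement to the \emph{irreflexivity} of the orthogonality relation $\underline{\perp}$ on ${\mathfrak{S}}\smallsetminus\{\bot_{{}_{{\mathfrak{S}}}}\}$, which itself is an immediate consequence of the ``star-complement'' axiom (\ref{starcomplement}). Suppose, for contradiction, that some $\sigma$ lies in $H^{\underline{\perp}}\cap H$. On the one hand $\sigma\in H$; on the other hand $\sigma\in H^{\underline{\perp}}$ means $\sigma\,\underline{\perp}\,\sigma'$ for every $\sigma'\in H$. Taking $\sigma'=\sigma$ (legitimate precisely because $\sigma\in H$) yields the self-orthogonality relation $\sigma\,\underline{\perp}\,\sigma$, and everything then reduces to showing this is impossible.

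Before invoking Definition (\ref{definorthoS}) I would first check that such a $\sigma$ is necessarily distinct from $\bot_{{}_{{\mathfrak{S}}}}$, so that $\underline{\perp}$ is actually defined at $\sigma$. Since $H\ni\sigma$ is non-empty and $\sigma\in H^{\underline{\perp}}$, Definition (\ref{definorthoS}) furnishes some $\omega\in\overline{{\mathfrak{S}}}\smallsetminus\{\bot_{{}_{\overline{{\mathfrak{S}}}}}\}$ with $\sigma\sqsupseteq_{{}_{{\mathfrak{S}}}}\omega$. If $\sigma$ were the bottom element $\bot_{{}_{{\mathfrak{S}}}}$, then $\omega\sqsubseteq_{{}_{{\mathfrak{S}}}}\bot_{{}_{{\mathfrak{S}}}}$ would force $\omega=\bot_{{}_{{\mathfrak{S}}}}=\bot_{{}_{\overline{{\mathfrak{S}}}}}$ (the last equality because $\bot_{{}_{{\mathfrak{S}}}}\in\overline{{\mathfrak{S}}}$ is below every element of $\overline{{\mathfrak{S}}}$), contradicting $\omega\neq\bot_{{}_{\overline{{\mathfrak{S}}}}}$. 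This is the only piece of bookkeeping that requires a moment's care.

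It then remains to unwind $\sigma\,\underline{\perp}\,\sigma$. By Definition (\ref{definorthoS}) there is $\omega\in\overline{{\mathfrak{S}}}\smallsetminus\{\bot_{{}_{\overline{{\mathfrak{S}}}}}\}$ with $\sigma\sqsupseteq_{{}_{{\mathfrak{S}}}}\omega$ and $\sigma\sqsupseteq_{{}_{{\mathfrak{S}}}}\omega^{\star}$, so $\sigma$ is a common upper bound of $\omega$ and $\omega^{\star}$ in ${\mathfrak{S}}$, i.e.\ $\widehat{\omega\omega^{\star}}{}^{{}^{{\mathfrak{S}}}}$ holds. This directly contradicts property (\ref{starcomplement}), which states $\neg\,\widehat{\omega\omega^{\star}}{}^{{}^{{\mathfrak{S}}}}$ for every $\omega\in\overline{{\mathfrak{S}}}\smallsetminus\{\bot_{{}_{{\mathfrak{S}}}}\}$. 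Hence $H^{\underline{\perp}}\cap H$ has no element, which is the claim.

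I do not expect any genuine obstacle: the argument uses only the irreflexivity encoded in (\ref{starcomplement}), and in fact does not use the orthoclosedness of $H$ at all — the same proof gives $S^{\underline{\perp}}\cap S=\varnothing$ for an arbitrary $S\subseteq{\mathfrak{S}}$. The sole subtlety, as noted, is that a priori $H^{\underline{\perp}}\cap H$ might contain $\bot_{{}_{{\mathfrak{S}}}}$, on which $\underline{\perp}$ is not defined; the observation in the second paragraph disposes of this. One could instead route through the explicit description of $S^{\underline{\perp}}$ from Lemma \ref{lemmaSorthogonal} and Theorem \ref{expressionorthogonal}, but that would be needlessly heavy.
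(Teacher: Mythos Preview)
Your proof is correct and is considerably more elementary than the paper's. The paper chooses to write $H=(S^{\underline{\perp}})^{\underline{\perp}}$ and then expands both $S^{\underline{\perp}}$ and $(S^{\underline{\perp}})^{\underline{\perp}}$ via the structural formulas (\ref{expressionSortho}) and (\ref{expressionSorthoortho}) involving $\Theta^{\overline{\mathfrak{S}}}$ and the maps $\phi\in U_S$; membership of a hypothetical $\sigma$ in both sets then produces a $\rho\in\overline{\mathfrak{S}}$ with $\rho,\rho^{\star}\sqsubseteq_{{}_{\mathfrak{S}}}\sigma$, contradicting (\ref{starcomplement}). You reach the very same contradiction in one line by noting that $\sigma\in H\cap H^{\underline{\perp}}$ forces $\sigma\,\underline{\perp}\,\sigma$, and that irreflexivity of $\underline{\perp}$ is immediate from (\ref{starcomplement}). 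Your observation that orthoclosedness of $H$ plays no role is also correct and worth retaining: the result holds for arbitrary $S\subseteq\mathfrak{S}$. The only cosmetic point is that your paragraph on $\bot_{{}_{\mathfrak{S}}}$ borders on circular in phrasing---you invoke the witness $\omega$ from (\ref{definorthoS}) to conclude $\sigma\neq\bot_{{}_{\mathfrak{S}}}$, but (\ref{definorthoS}) is stated for $\sigma\in\mathfrak{S}\smallsetminus\{\bot_{{}_{\mathfrak{S}}}\}$---though the substance is fine: since $\underline{\perp}$ is only defined off $\bot_{{}_{\mathfrak{S}}}$, membership in $H^{\underline{\perp}}$ with $H\neq\varnothing$ already excludes $\bot_{{}_{\mathfrak{S}}}$.
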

\begin{proof}
Let us fix $S\subseteq {\mathfrak{S}}$.\\
Let us suppose that there exists $\sigma\in {\mathfrak{S}}$ such that $\sigma\in \left((S^{\underline{\perp}})^{\underline{\perp}} \cap (S^{\underline{\perp}})  \right)$.  We intent to exhibit a contradiction.\\
Using expressions (\ref{expressionSortho}) and (\ref{expressionSorthoortho}), we deduce that 
\begin{itemize}
\item there exists $\phi\in U_S$ such that $\left(\bigsqcup{}^{{}^{{}{\mathfrak{S}}}}_{\sigma\in S} (\phi(\sigma))^\star\right) \sqsubseteq_{{}_{{}{\mathfrak{S}}}} \sigma$,
\item for any $\psi\in U_S$, there exists $\rho\in \Theta^{\overline{{ \mathfrak{S}}}}\left(\bigsqcup{}^{{}^{{ \mathfrak{S}}}}_{\sigma\in S} (\psi(\sigma))^\star\right)$ such that $\rho^\star \sqsubseteq_{{}_{{}{\mathfrak{S}}}} \sigma$.
\end{itemize}
From the first point, we deduce that for any $\kappa \in \Theta^{\overline{{ \mathfrak{S}}}}\left(\bigsqcup{}^{{}^{{}{\mathfrak{S}}}}_{\sigma\in S} (\phi(\sigma))^\star\right)$ we have $\kappa \sqsubseteq_{{}_{{}{\mathfrak{S}}}} \sigma$. Let us now choose for $\psi$ in the second point the element $\phi\in U_S$ fixed in the first point, and let us choose in the second point $\kappa$ to be equal to the $\rho$ fixed by the second point, we have then simultaneously $\rho \sqsubseteq_{{}_{{}{\mathfrak{S}}}} \sigma$ and $\rho^\star \sqsubseteq_{{}_{{}{\mathfrak{S}}}} \sigma$. We have then obtained that there exists an element $\rho\in \overline{\mathfrak{S}}$ such that $\rho$ and $\rho^\star$ admit a common upper bound in ${}{\mathfrak{S}}$, which is impossible. We have then obtained the announced contradiction. This concludes the proof.
\end{proof}

\begin{theorem}
${ \mathfrak{H}}({}{\mathfrak{S}}, {\underline{\perp}})$ is a complete ortho-lattice.
\end{theorem}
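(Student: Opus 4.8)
The plan is to assemble both the complete-lattice structure and the orthocomplementation from results already at hand. First I would record that, by Lemma \ref{lemmaorthocapsemi-lattice}, $({ \mathfrak{H}}({\mathfrak{S}}, {\underline{\perp}}),\subseteq)$ is a down complete Inf semi-lattice whose infimum is set-theoretic intersection, and that by Lemma \ref{lemmavarnothing&E} it possesses a bottom element $\varnothing$ and a top element $\top:=(\uparrow^{{}^{{}{\mathfrak{S}}}}\!\!\!\bot_{{}_{\mathfrak{S}}})$; one checks directly that $\varnothing^{\underline{\perp}}=\top$ (vacuously) and that $\top^{\underline{\perp}}=\varnothing$ by (\ref{starcomplement}), since no element is orthogonal to itself. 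The existence of arbitrary suprema then comes for free: for a family $\{H_i\}_{i\in I}\subseteq { \mathfrak{H}}({\mathfrak{S}}, {\underline{\perp}})$ the bi-orthogonal $(\bigcup_{i\in I}H_i)^{{\underline{\perp}}{\underline{\perp}}}$ lies in ${ \mathfrak{H}}({\mathfrak{S}}, {\underline{\perp}})$ by (\ref{ortho4=ortho2}) and, using (\ref{orthoorthoexpansive}), (\ref{orthoorthomonotone}) and $H^{{\underline{\perp}}{\underline{\perp}}}=H$ for orthoclosed $H$, it is the least orthoclosed set containing every $H_i$; by the order-reversing involution of Theorem \ref{orthocomplem0} this join may equivalently be written $(\bigcap_{i\in I}H_i^{\underline{\perp}})^{\underline{\perp}}$, which for two terms is the operation $\curlyvee$ of (\ref{demorgan}). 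Hence ${ \mathfrak{H}}({\mathfrak{S}}, {\underline{\perp}})$ is a complete lattice.

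Next I would check that $H\mapsto H^{\underline{\perp}}$ is an orthocomplementation. It takes values in ${ \mathfrak{H}}({\mathfrak{S}}, {\underline{\perp}})$, and by Theorem \ref{orthocomplem0} it is order-reversing and involutive, so only the two complementation laws remain. The meet law $H\cap H^{\underline{\perp}}=\varnothing$ is precisely Theorem \ref{theoremorthocomplementation}. The join law follows from the De Morgan formula (\ref{demorgan}), involutivity, and Theorem \ref{theoremorthocomplementation} once more, since $H\curlyvee H^{\underline{\perp}}=(H^{\underline{\perp}}\cap (H^{\underline{\perp}})^{\underline{\perp}})^{\underline{\perp}}=(H^{\underline{\perp}}\cap H)^{\underline{\perp}}=\varnothing^{\underline{\perp}}=\top$. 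Combining the two paragraphs we obtain a complete lattice with bottom $\varnothing$, top $\top$, and an order-reversing involution sending each element to a complement, i.e.\ a complete ortho-lattice.

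This proof is essentially bookkeeping over the preceding lemmas, and I do not expect a genuine obstacle; the points needing a little care are (i) the passage from ``down complete Inf semi-lattice with a top'' to ``complete lattice'', together with the verification that the supremum so obtained coincides with the bi-orthogonal closure (equivalently, that $\curlyvee$ of (\ref{demorgan}) really computes joins in ${ \mathfrak{H}}({\mathfrak{S}}, {\underline{\perp}})$); and (ii) the combination of $H\cap H^{\underline{\perp}}=\varnothing$ with $\varnothing^{\underline{\perp}}=\top$, which via the De Morgan identity is exactly what forces $H\curlyvee H^{\underline{\perp}}=\top$ --- this is where Theorem \ref{theoremorthocomplementation} does the real work.
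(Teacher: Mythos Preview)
Your proof is correct and follows the same approach as the paper, which simply states that the theorem summarizes Lemma \ref{lemmavarnothing&E}, Lemma \ref{lemmaorthocapsemi-lattice}, Theorem \ref{orthocomplem0} and Theorem \ref{theoremorthocomplementation}. You have just spelled out the assembly in more detail, including the explicit verification that $\curlyvee$ computes joins and that $H\curlyvee H^{\underline{\perp}}=\top$ follows from Theorem \ref{theoremorthocomplementation} via De Morgan.
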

\begin{proof}
This result summarizes the results obtained in Lemma \ref{lemmavarnothing&E}, Lemma \ref{lemmaorthocapsemi-lattice}, Theorem \ref{orthocomplem0} and Theorem \ref{theoremorthocomplementation}.
\end{proof}

\begin{definition}\label{deflangleSrangle}
We will define the set $\langle {}{\mathfrak{S}}\rangle$, as follows
\begin{eqnarray}
\hspace{-1.8cm}&&\langle{}{\mathfrak{S}}\rangle :=  \{\, \{\sigma\}\;\vert\; \sigma\in ({}{\mathfrak{S}}\smallsetminus \overline{\mathfrak{S}})\,\} \cup \underbrace{{}{\mathfrak{S}}} \cup\{\bot_{{}_{\mathfrak{S}}} \}\;\;\textit{\rm with}\\
\hspace{-1.8cm}&&\underbrace{{}{\mathfrak{S}}}:= \{\,S\subseteq \overline{{ \mathfrak{S}}}\smallsetminus \{\bot_{{}_{\mathfrak{S}}} \}\;\vert\; \exists \sigma\in {{}{\mathfrak{S}}}\smallsetminus \{\bot_{{}_{\mathfrak{S}}} \}\;\;\textit{\rm s.t.}\;\; \Theta^{\overline{{ \mathfrak{S}}}}(\sigma)=\{\,\alpha^\star\;\vert\; \alpha\in S\,\}\,\}.
\end{eqnarray}
We note obviously that ${}{\mathfrak{S}}\subseteq \langle{}{\mathfrak{S}}\rangle$.
\end{definition}

\begin{lemma}
The set ${ \mathfrak{H}}({}{\mathfrak{S}}, {\underline{\perp}})$ of orthoclosed subsets of ${}{\mathfrak{S}}$ is explicitly described in terms of $\langle {}{\mathfrak{S}}\rangle$, as follows :
\begin{eqnarray}
{ \mathfrak{H}}({}{\mathfrak{S}}, {\underline{\perp}}) &=& \{\, (\uparrow^{{}^{{{}{\mathfrak{S}}}}}S)\;\vert\; S\in \langle{}{\mathfrak{S}}\rangle\cup \{\varnothing\}\,\}.
\end{eqnarray}
\end{lemma}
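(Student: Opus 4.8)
The plan is to start from the fact recorded in the lemma just before Theorem~\ref{orthocomplem0} (a consequence of (\ref{ortho3=ortho})), namely that ${ \mathfrak{H}}({}{\mathfrak{S}},{\underline{\perp}})$ is precisely $\{\,S^{\underline{\perp}}\;\vert\;S\subseteq{}{\mathfrak{S}}\,\}$, and then to establish the two inclusions between this set and $\{\,(\uparrow^{{}^{ \mathfrak{S}}}\!S)\;\vert\;S\in\langle{}{\mathfrak{S}}\rangle\cup\{\varnothing\}\,\}$. Throughout, for a subset $S$, I read $\uparrow^{{}^{ \mathfrak{S}}}\!S$ as the upward closure $\bigcup_{s\in S}(\uparrow^{{}^{ \mathfrak{S}}}\!s)$, as dictated by the union appearing in Lemma~\ref{lemmaSorthogonal}; in particular $\uparrow^{{}^{ \mathfrak{S}}}\!\varnothing=\varnothing$ and $\uparrow^{{}^{ \mathfrak{S}}}\!\{\bot_{{}_{ \mathfrak{S}}}\}={}{\mathfrak{S}}$. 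I also record the elementary fact that $\Theta^{\overline{{ \mathfrak{S}}}}(\xi)=\{\xi\}$ for every $\xi\in\overline{{ \mathfrak{S}}}$, since then $\xi$ is the greatest element of $(\downarrow_{{}_{ \mathfrak{S}}}\xi)\cap\overline{{ \mathfrak{S}}}$.

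For the inclusion $\supseteq$ (every $\uparrow^{{}^{ \mathfrak{S}}}\!S$, $S\in\langle{}{\mathfrak{S}}\rangle\cup\{\varnothing\}$, is orthoclosed) I would run through the four possible shapes of $S$. If $S=\varnothing$ or $S=\{\bot_{{}_{ \mathfrak{S}}}\}$ then $\uparrow^{{}^{ \mathfrak{S}}}\!S$ is $\varnothing$ or $\uparrow^{{}^{ \mathfrak{S}}}\!\bot_{{}_{ \mathfrak{S}}}$, which are orthoclosed by Lemma~\ref{lemmavarnothing&E}. If $S\in\underbrace{{}{\mathfrak{S}}}$, pick $\zeta\in{}{\mathfrak{S}}\smallsetminus\{\bot_{{}_{ \mathfrak{S}}}\}$ with $\Theta^{\overline{{ \mathfrak{S}}}}(\zeta)=\{\,\alpha^\star\;\vert\;\alpha\in S\,\}$; applying the involutivity (\ref{involutive}) of $\star$ this is the same as $S=\{\,\omega^\star\;\vert\;\omega\in\Theta^{\overline{{ \mathfrak{S}}}}(\zeta)\,\}$, and then Lemma~\ref{lemmaSorthogonal} gives $\uparrow^{{}^{ \mathfrak{S}}}\!S=\{\zeta\}^{\underline{\perp}}\in{ \mathfrak{H}}({}{\mathfrak{S}},{\underline{\perp}})$. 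Finally, if $S=\{\sigma\}$ with $\sigma$ a hidden state, I would prove $\uparrow^{{}^{ \mathfrak{S}}}\!\sigma=\{\sigma\}^{\underline{\perp}\underline{\perp}}$, which lies in ${ \mathfrak{H}}({}{\mathfrak{S}},{\underline{\perp}})$. The inclusion $\uparrow^{{}^{ \mathfrak{S}}}\!\sigma\subseteq\{\sigma\}^{\underline{\perp}\underline{\perp}}$ is immediate: $\sigma\in\{\sigma\}^{\underline{\perp}\underline{\perp}}$ by (\ref{orthoorthoexpansive}), and every set of the form $\cdot^{\underline{\perp}}$ is upward closed, since in Definition~\ref{definorthoS} a witness $\omega$ of $\tau\,\underline{\perp}\,\rho$ with $\omega\sqsubseteq_{{}_{ \mathfrak{S}}}\tau\sqsubseteq_{{}_{ \mathfrak{S}}}\tau'$ also witnesses $\tau'\,\underline{\perp}\,\rho$. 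For the reverse inclusion I would observe that, by the elementary fact above, the expression (\ref{expressionSorthoortho}) with $S=\{\sigma\}$ collapses to $\{\sigma\}^{\underline{\perp}\underline{\perp}}=\bigcap_{\omega\in\Theta^{\overline{{ \mathfrak{S}}}}(\sigma)}(\uparrow^{{}^{ \mathfrak{S}}}\!\omega)$, i.e. the set of common upper bounds of $\Theta^{\overline{{ \mathfrak{S}}}}(\sigma)$; this equals $\uparrow^{{}^{ \mathfrak{S}}}\!\sigma$ exactly when $\sigma=\bigsqcup{}^{{}^{ \mathfrak{S}}}\Theta^{\overline{{ \mathfrak{S}}}}(\sigma)$, a property of $\Theta^{\overline{{ \mathfrak{S}}}}$ that I would import from subsection~\ref{subsectionpreliminaryhidden}.

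For the inclusion $\subseteq$ (every orthoclosed subset has the stated form), let $H=S^{\underline{\perp}}$ be orthoclosed. If $\bot_{{}_{ \mathfrak{S}}}\in S$ then $H=\varnothing$, and if $S=\varnothing$ then $H={}{\mathfrak{S}}=\uparrow^{{}^{ \mathfrak{S}}}\!\{\bot_{{}_{ \mathfrak{S}}}\}$; so I may assume $\varnothing\neq S\subseteq{}{\mathfrak{S}}\smallsetminus\{\bot_{{}_{ \mathfrak{S}}}\}$. Theorem~\ref{expressionorthogonal} then rewrites $H$ as $\uparrow^{{}^{ \mathfrak{S}}}\!V_S$, with $V_S=\{\,\bigsqcup{}^{{}^{ \mathfrak{S}}}_{\sigma\in S}(\phi(\sigma))^\star\;\vert\;\phi\in U_S\,\}$ (an index $\phi$ for which this join fails to exist contributing $\varnothing$). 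It remains to produce $W\in\langle{}{\mathfrak{S}}\rangle\cup\{\varnothing\}$ with $\uparrow^{{}^{ \mathfrak{S}}}\!W=H$: if $H=\varnothing$ take $W=\varnothing$; if $H=\uparrow^{{}^{ \mathfrak{S}}}\!\eta$ for a single hidden state $\eta$ take $W=\{\eta\}$; and in the remaining case I would invoke the analysis of $\Theta^{\overline{{ \mathfrak{S}}}}$ and of the completion procedure from subsection~\ref{subsectionpreliminaryhidden} to recognise $H$ as $\uparrow^{{}^{ \mathfrak{S}}}\!\{\,\omega^\star\;\vert\;\omega\in\Theta^{\overline{{ \mathfrak{S}}}}(\zeta)\,\}$ for a suitable $\zeta\in{}{\mathfrak{S}}\smallsetminus\{\bot_{{}_{ \mathfrak{S}}}\}$, i.e. as $\uparrow^{{}^{ \mathfrak{S}}}\!W$ with $W\in\underbrace{{}{\mathfrak{S}}}$.

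I expect the genuine work to be concentrated in this last recognition step, and dually in the reverse inclusion $\{\sigma\}^{\underline{\perp}\underline{\perp}}\subseteq\uparrow^{{}^{ \mathfrak{S}}}\!\sigma$ for hidden $\sigma$: one must show that the set of minimal common upper bounds of a family of $\star$-images of $\Theta^{\overline{{ \mathfrak{S}}}}$-sets is again the $\star$-image of a single $\Theta^{\overline{{ \mathfrak{S}}}}$-set, which is exactly the closure property of $\langle{}{\mathfrak{S}}\rangle$ that the hidden states, adjoined precisely so as to complete the ortho-lattice ${ \mathfrak{H}}({}{\mathfrak{S}},{\underline{\perp}})$, are there to supply. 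Once the bookkeeping of $\Theta^{\overline{{ \mathfrak{S}}}}$, of $\star$, of down-completeness and of the separation axiom (\ref{realChuseparated}) carried out in subsection~\ref{subsectionpreliminaryhidden} is in hand, the remainder is a routine assembly of the identities above.
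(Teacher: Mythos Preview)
Your approach matches the paper's, which records the lemma as a ``direct consequence of formula~(\ref{expressionSorthoortho})''; your case analysis for $\supseteq$ is correct (the forward reference to $\sigma=\bigsqcup^{{}^{\mathfrak{S}}}\Theta^{\overline{\mathfrak{S}}}(\sigma)$ is legitimate, since Lemma~\ref{lemmaLambda} rests only on axiom~(\ref{realChuseparated}) already in force here), and for $\subseteq$ you correctly isolate the residual ``recognition step'' that the paper's one-line proof likewise leaves implicit. The only cosmetic difference is that you invoke~(\ref{expressionSortho}) rather than~(\ref{expressionSorthoortho}) for the $\subseteq$ direction, but these are equivalent starting points.
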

\begin{proof}
Direct consequence of formula (\ref{expressionSorthoortho}).
\end{proof}

\begin{lemma}\label{rangleHsubsetH}
The action of the orthogonality operator on the elements of ${ \mathfrak{H}}({}{\mathfrak{S}}, {\underline{\perp}})$ is explicitly given by
\begin{eqnarray}
\varnothing^{\underline{\perp}} &=& (\uparrow^{{}^{{{}{\mathfrak{S}}}}}\bot_{{}_{{ \mathfrak{S}}}} ),\\
(\uparrow^{{}^{{{}{\mathfrak{S}}}}}\bot_{{}_{{ \mathfrak{S}}}} )^{\underline{\perp}} &=& \varnothing,\\
\forall \sigma\in \overline{ \mathfrak{S}}\smallsetminus \{\bot_{{}_{{ \mathfrak{S}}}}\},\;\;(\uparrow^{{}^{{{}{\mathfrak{S}}}}}\sigma )^{\underline{\perp}} &=&(\uparrow^{{}^{{{}{\mathfrak{S}}}}}\sigma^\star ),\\
\forall \sigma\in {}{\mathfrak{S}}\smallsetminus \overline{ \mathfrak{S}},\;\; (\uparrow^{{}^{{{}{\mathfrak{S}}}}}\sigma )^{\underline{\perp}} &=& \uparrow^{{}^{{{}{\mathfrak{S}}}}} \{\, \alpha^\star\;\vert\; \alpha\in \Theta^{\overline{{ \mathfrak{S}}}}(\sigma)\;\},\\
\forall S\in \underbrace{{}{\mathfrak{S}}},\;\;
(\uparrow^{{}^{{{}{\mathfrak{S}}}}}S)^{\underline{\perp}} &=& \uparrow^{{}^{{{}{\mathfrak{S}}}}} \bigsqcup{}^{{}^{{{}{\mathfrak{S}}}}}\{\, \alpha^\star\;\vert\; \alpha\in S\;\}.
\end{eqnarray}
\end{lemma}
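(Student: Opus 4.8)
The plan is to reduce every clause of the lemma to the explicit description of the orthogonal of a single subset already furnished by Lemma~\ref{lemmaSorthogonal} (equivalently Theorem~\ref{expressionorthogonal}), after one preliminary reduction: passing from a subset to the upper set it generates leaves the orthogonal unchanged. This rests on the elementary fact that $\underline{\perp}$ is decreasing in each argument for $\sqsubseteq_{{}_{{ \mathfrak{S}}}}$: if $\sigma\,\underline{\perp}\,\sigma'$ is witnessed by some $\omega\in\overline{ \mathfrak{S}}\smallsetminus\{\bot_{{}_{{ \mathfrak{S}}}}\}$ as in (\ref{definorthoS}), then the very same $\omega$ witnesses $\tau\,\underline{\perp}\,\tau'$ for all $\tau\sqsupseteq_{{}_{{ \mathfrak{S}}}}\sigma$ and $\tau'\sqsupseteq_{{}_{{ \mathfrak{S}}}}\sigma'$. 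Hence every $S^{\underline{\perp}}$ is an upper set of ${}{\mathfrak{S}}$; since $S^{\underline{\perp}\underline{\perp}}$ is likewise upward closed and contains $S$ by (\ref{orthoorthoexpansive}), we get $S\subseteq(\uparrow^{{}^{{}{\mathfrak{S}}}}S)\subseteq S^{\underline{\perp}\underline{\perp}}$, and applying $\underline{\perp}$ together with (\ref{ortho3=ortho}) yields $(\uparrow^{{}^{{}{\mathfrak{S}}}}S)^{\underline{\perp}}=S^{\underline{\perp}}$ for every $S\subseteq{}{\mathfrak{S}}$.

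With this in hand I would dispatch the five clauses. For $\varnothing$, the defining condition of (\ref{deforthosubset}) is vacuous, so $\varnothing^{\underline{\perp}}={}{\mathfrak{S}}=(\uparrow^{{}^{{}{\mathfrak{S}}}}\bot_{{}_{{ \mathfrak{S}}}})$. For $(\uparrow^{{}^{{}{\mathfrak{S}}}}\bot_{{}_{{ \mathfrak{S}}}})={}{\mathfrak{S}}$: by (\ref{starcomplement}) the relation $\underline{\perp}$ is irreflexive on ${}{\mathfrak{S}}$, so no element can lie in $({}{\mathfrak{S}})^{\underline{\perp}}$ (it would be orthogonal to itself), giving $({}{\mathfrak{S}})^{\underline{\perp}}=\varnothing$. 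For $\sigma\in\overline{ \mathfrak{S}}\smallsetminus\{\bot_{{}_{{ \mathfrak{S}}}}\}$: since $\sigma\in(\downarrow_{{}_{ \mathfrak{S}}}\sigma)\cap\overline{ \mathfrak{S}}$ dominates every element of that set, $\Theta^{\overline{ \mathfrak{S}}}(\sigma)=\{\sigma\}$, so Lemma~\ref{lemmaSorthogonal} gives $\{\sigma\}^{\underline{\perp}}=(\uparrow^{{}^{{}{\mathfrak{S}}}}\sigma^\star)$, and the reduction upgrades this to $(\uparrow^{{}^{{}{\mathfrak{S}}}}\sigma)^{\underline{\perp}}=(\uparrow^{{}^{{}{\mathfrak{S}}}}\sigma^\star)$. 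For $\sigma\in{}{\mathfrak{S}}\smallsetminus\overline{ \mathfrak{S}}$: Lemma~\ref{lemmaSorthogonal} directly gives $\{\sigma\}^{\underline{\perp}}=\bigcup_{\omega\in\Theta^{\overline{ \mathfrak{S}}}(\sigma)}(\uparrow^{{}^{{}{\mathfrak{S}}}}\omega^\star)=\uparrow^{{}^{{}{\mathfrak{S}}}}\{\,\alpha^\star\;\vert\;\alpha\in\Theta^{\overline{ \mathfrak{S}}}(\sigma)\,\}$, which the reduction promotes to $(\uparrow^{{}^{{}{\mathfrak{S}}}}\sigma)^{\underline{\perp}}$. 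Finally, for $S\in\underbrace{{}{\mathfrak{S}}}$: the reduction and Lemma~\ref{lemmaSorthogonal} give $(\uparrow^{{}^{{}{\mathfrak{S}}}}S)^{\underline{\perp}}=S^{\underline{\perp}}=\bigcap_{\alpha\in S}\{\alpha\}^{\underline{\perp}}=\bigcap_{\alpha\in S}(\uparrow^{{}^{{}{\mathfrak{S}}}}\alpha^\star)$ (using the real-state clause for each $\alpha\in S\subseteq\overline{ \mathfrak{S}}\smallsetminus\{\bot_{{}_{{ \mathfrak{S}}}}\}$); by the very definition of $\underbrace{{}{\mathfrak{S}}}$ the set $\{\,\alpha^\star\;\vert\;\alpha\in S\,\}$ equals $\Theta^{\overline{ \mathfrak{S}}}(\sigma)$ for some $\sigma\ne\bot_{{}_{{ \mathfrak{S}}}}$, hence is contained in $\downarrow_{{}_{ \mathfrak{S}}}\sigma$ and so bounded above; by down-completeness of ${}{\mathfrak{S}}$ its join exists, and an intersection of principal up-sets over a family admitting a join is the principal up-set of that join, yielding $\uparrow^{{}^{{}{\mathfrak{S}}}}\bigsqcup{}^{{}^{{}{\mathfrak{S}}}}\{\,\alpha^\star\;\vert\;\alpha\in S\,\}$.

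The only delicate point is bottom-element bookkeeping: since $\star$ is undefined on $\bot_{{}_{{ \mathfrak{S}}}}$, the symbols $\omega^\star$ above presuppose that $\Theta^{\overline{ \mathfrak{S}}}(\sigma)\subseteq\overline{ \mathfrak{S}}\smallsetminus\{\bot_{{}_{{ \mathfrak{S}}}}\}$ whenever $\sigma\ne\bot_{{}_{{ \mathfrak{S}}}}$, which I would import from the analysis of $\Theta^{\overline{ \mathfrak{S}}}$ in subsection~\ref{subsectionpreliminaryhidden}, together with the existence of the bounded joins used in the last clause; one should also check that the conventions $(\uparrow^{{}^{{}{\mathfrak{S}}}}\centerdot):=\varnothing$ keep the degenerate sub-cases consistent. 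Granting these, each clause is a straightforward unfolding of Lemma~\ref{lemmaSorthogonal} combined with the upper-set reduction, and I do not expect any further obstacle.
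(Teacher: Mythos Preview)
Your proof is correct and follows the same route as the paper, whose entire proof reads ``Direct consequence of Lemma~\ref{lemmaSorthogonal}.'' You have simply spelled out the details that the paper leaves implicit---the upper-set reduction $(\uparrow^{{}^{\mathfrak{S}}}S)^{\underline{\perp}}=S^{\underline{\perp}}$, the case split over the five clauses, and the existence of the join in the last clause---together with the bottom-element caveat, none of which the paper makes explicit.
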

\begin{proof}
Direct consequence of Lemma \ref{lemmaSorthogonal}.
\end{proof}

\subsection{Real morphisms}

\begin{definition} Let us consider that the states/effects Chu Spaces $({ \mathfrak{S}}_A,{ \mathfrak{E}}_{{ \mathfrak{S}}_A},\epsilon^{{ \mathfrak{S}}_A})$ and $({ \mathfrak{S}}_B,{ \mathfrak{E}}_{{ \mathfrak{S}}_B},\epsilon^{{ \mathfrak{S}}_B})$ are respectively equipped with real structures $(\overline{ \mathfrak{S}}_A,\star)$ and $(\overline{ \mathfrak{S}}_B,\star)$ (we use the same notations for the star involutions as long as there can be no ambiguities). \\
Any morphism $(f ,f^{\ast})$ from the states/effects Chu Space $({ \mathfrak{S}}_A,{ \mathfrak{E}}_{{ \mathfrak{S}}_A},\epsilon^{{ \mathfrak{S}}_A})$ to the states/effects Chu Space $({ \mathfrak{S}}_B,{ \mathfrak{E}}_{{ \mathfrak{S}}_B},\epsilon^{{ \mathfrak{S}}_B})$ satisfying
\begin{eqnarray}
f^{\ast}(\overline{ \mathfrak{E}}_{{ \mathfrak{S}}_B}) & \subseteq & \overline{ \mathfrak{E}}_{{ \mathfrak{S}}_A}.\label{defrealmorphism}
\end{eqnarray}
will be said to be {\em a real-morphism}.\\
The set of real morphisms from ${ \mathfrak{S}}_A$ to ${ \mathfrak{S}}_B$ will be denoted $\overline{\mathfrak{C}}({ \mathfrak{S}}_A,{ \mathfrak{S}}_B)$.
\end{definition}

\begin{definition}
If ${ \mathfrak{S}}$ admits a real structure, we can define the set of {\em real measurements,} denoted by $ \overline{\mathfrak{M}}_{{}_{{ \mathfrak{S}}}}$, as the image by $\varphi$ (defined in (\ref{defvarphi})) of $\overline{ \mathfrak{E}}_{ \mathfrak{S}}$.  %We can also define the set of {\em pure real measurements}, denoted by $ \overline{ \mathfrak{M}}^{{}^{pure}}_{ \mathfrak{S}}$, as the image by $\varphi$ (defined in (\ref{defvarphi})) of $\{\,{ \mathfrak{l}}_{{}_{(\sigma,\sigma^\star)}}\;\vert\; \sigma\in \overline{ \mathfrak{S}}\smallsetminus \{\bot_{{}_{ \mathfrak{S}}}\}\,\} \cup \{{ \mathfrak{Y}}_{{}_{{ \mathfrak{E}}_{ \mathfrak{S}}}}\} \cup \{\overline{{ \mathfrak{Y}}_{{}_{{ \mathfrak{E}}_{ \mathfrak{S}}}}}\}$. 

\end{definition}

\begin{lemma}
Real measurements (as defined in previous definition) are obviously real morphisms from ${ \mathfrak{S}}$ to ${ \mathfrak{B}}$.  
\end{lemma}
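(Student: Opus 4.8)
The plan is to unwind the definitions. By the theorem identifying ${\mathfrak{M}}_{{}_{{ \mathfrak{S}}}}$ with ${ \mathfrak{E}}_{ \mathfrak{S}}$, a real measurement ${ \mathfrak{m}}_{{ \mathfrak{l}}}=\varphi({ \mathfrak{l}})$ (with ${ \mathfrak{l}}\in\overline{ \mathfrak{E}}_{ \mathfrak{S}}$) is already a morphism from ${ \mathfrak{S}}$ to ${ \mathfrak{B}}$, realised by the Chu pair $({ \mathfrak{m}}_{{ \mathfrak{l}}},{ \mathfrak{m}}^\ast_{{ \mathfrak{l}}})$ whose right component ${ \mathfrak{m}}^\ast_{{ \mathfrak{l}}}:{ \mathfrak{E}}_{ \mathfrak{B}}\rightarrow{ \mathfrak{E}}_{ \mathfrak{S}}$ is determined by the duality relation (\ref{appliedduality}) and in particular satisfies (\ref{mastu}), i.e. ${ \mathfrak{m}}^\ast_{{ \mathfrak{l}}}({ \mathfrak{l}}_{(\textit{\bf Y},\textit{\bf N})})={ \mathfrak{l}}$. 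So the only thing left to check is that ${ \mathfrak{m}}^\ast_{{ \mathfrak{l}}}$ maps real effects to real effects, i.e. condition (\ref{defrealmorphism}): ${ \mathfrak{m}}^\ast_{{ \mathfrak{l}}}(\overline{ \mathfrak{E}}_{ \mathfrak{B}})\subseteq\overline{ \mathfrak{E}}_{ \mathfrak{S}}$. Since $\overline{ \mathfrak{E}}_{ \mathfrak{B}}\subseteq{ \mathfrak{E}}_{ \mathfrak{B}}$ whatever the real structure carried by ${ \mathfrak{B}}$, it suffices to prove the stronger inclusion ${ \mathfrak{m}}^\ast_{{ \mathfrak{l}}}({ \mathfrak{E}}_{ \mathfrak{B}})\subseteq\overline{ \mathfrak{E}}_{ \mathfrak{S}}$.

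Next I would describe ${ \mathfrak{E}}_{ \mathfrak{B}}$ explicitly. The only pair of elements of ${ \mathfrak{B}}$ without a common upper bound is $(\textit{\bf Y},\textit{\bf N})$, together with $(\textit{\bf N},\textit{\bf Y})$; hence ${ \mathfrak{E}}_{ \mathfrak{B}}$ consists of the nine effects ${ \mathfrak{l}}_{(\textit{\bf Y},\textit{\bf N})}$, ${ \mathfrak{l}}_{(\textit{\bf N},\textit{\bf Y})}$, the ${ \mathfrak{l}}_{(u,\centerdot)}$ and ${ \mathfrak{l}}_{(\centerdot,u)}$ for $u\in{ \mathfrak{B}}$, and ${ \mathfrak{l}}_{(\centerdot,\centerdot)}$. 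A direct computation with the Inf-semilattice law (\ref{defcapES}) and the $\centerdot$-conventions shows that this whole set is generated, under the operations $\sqcap_{{ \mathfrak{E}}_{ \mathfrak{B}}}$ and the bar involution, by the two elements ${ \mathfrak{l}}_{(\textit{\bf Y},\textit{\bf N})}$ and ${ \mathfrak{Y}}_{{ \mathfrak{E}}_{ \mathfrak{B}}}={ \mathfrak{l}}_{(\bot_{{}_{{ \mathfrak{B}}}},\centerdot)}$: e.g. ${ \mathfrak{l}}_{(\textit{\bf N},\textit{\bf Y})}=\overline{{ \mathfrak{l}}_{(\textit{\bf Y},\textit{\bf N})}}$, ${ \mathfrak{l}}_{(\centerdot,\centerdot)}={ \mathfrak{l}}_{(\textit{\bf Y},\textit{\bf N})}\sqcap_{{ \mathfrak{E}}_{ \mathfrak{B}}}{ \mathfrak{l}}_{(\textit{\bf N},\textit{\bf Y})}$, ${ \mathfrak{l}}_{(\textit{\bf Y},\centerdot)}={ \mathfrak{l}}_{(\textit{\bf Y},\textit{\bf N})}\sqcap_{{ \mathfrak{E}}_{ \mathfrak{B}}}{ \mathfrak{Y}}_{{ \mathfrak{E}}_{ \mathfrak{B}}}$, ${ \mathfrak{l}}_{(\textit{\bf N},\centerdot)}={ \mathfrak{l}}_{(\textit{\bf N},\textit{\bf Y})}\sqcap_{{ \mathfrak{E}}_{ \mathfrak{B}}}{ \mathfrak{Y}}_{{ \mathfrak{E}}_{ \mathfrak{B}}}$, and the remaining three are the bar-images of ${ \mathfrak{Y}}_{{ \mathfrak{E}}_{ \mathfrak{B}}},{ \mathfrak{l}}_{(\textit{\bf Y},\centerdot)},{ \mathfrak{l}}_{(\textit{\bf N},\centerdot)}$.

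Finally I would push these two generators through ${ \mathfrak{m}}^\ast_{{ \mathfrak{l}}}$. The right component of a Chu morphism commutes with arbitrary infima by (\ref{f21cap}), with the bar involution by (\ref{f21bar}), and sends ${ \mathfrak{Y}}$ to ${ \mathfrak{Y}}$ by (\ref{f21Y}); hence ${ \mathfrak{m}}^\ast_{{ \mathfrak{l}}}({ \mathfrak{l}}_{(\textit{\bf Y},\textit{\bf N})})={ \mathfrak{l}}$ by (\ref{mastu}) and ${ \mathfrak{m}}^\ast_{{ \mathfrak{l}}}({ \mathfrak{Y}}_{{ \mathfrak{E}}_{ \mathfrak{B}}})={ \mathfrak{Y}}_{{ \mathfrak{E}}_{ \mathfrak{S}}}$. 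Both images lie in $\overline{ \mathfrak{E}}_{ \mathfrak{S}}$: the first because ${ \mathfrak{m}}_{{ \mathfrak{l}}}$ being a real measurement means precisely ${ \mathfrak{l}}\in\overline{ \mathfrak{E}}_{ \mathfrak{S}}$, the second by (\ref{axiomreduc4}). Since $\overline{ \mathfrak{E}}_{ \mathfrak{S}}$ is closed under infima (\ref{axiomreduc2}) and under the bar involution (\ref{axiomreduc3}), and ${ \mathfrak{m}}^\ast_{{ \mathfrak{l}}}$ intertwines those two operations, the ${ \mathfrak{m}}^\ast_{{ \mathfrak{l}}}$-image of the sub-semilattice-with-involution generated by ${ \mathfrak{l}}_{(\textit{\bf Y},\textit{\bf N})}$ and ${ \mathfrak{Y}}_{{ \mathfrak{E}}_{ \mathfrak{B}}}$ — namely all of ${ \mathfrak{E}}_{ \mathfrak{B}}$ — is contained in $\overline{ \mathfrak{E}}_{ \mathfrak{S}}$. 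This is the inclusion (\ref{defrealmorphism}), so ${ \mathfrak{m}}_{{ \mathfrak{l}}}$ is a real morphism. There is no genuine obstacle here: the content is pure bookkeeping, and the only step needing attention is verifying that ${ \mathfrak{l}}_{(\textit{\bf Y},\textit{\bf N})}$ and ${ \mathfrak{Y}}_{{ \mathfrak{E}}_{ \mathfrak{B}}}$ really do generate ${ \mathfrak{E}}_{ \mathfrak{B}}$, which requires keeping track of the conventions for $\widehat{\;\cdot\;}$ and $\sqsubseteq$ when a $\centerdot$ appears in a slot.
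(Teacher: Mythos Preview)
Your proof is correct and follows essentially the same approach as the paper: both rely on the fact (already noted in the proof of the isomorphism ${\mathfrak{M}}_{{}_{{ \mathfrak{S}}}}\cong{ \mathfrak{E}}_{ \mathfrak{S}}$) that ${ \mathfrak{m}}^\ast_{{ \mathfrak{l}}}$ on all of ${ \mathfrak{E}}_{ \mathfrak{B}}$ is determined via (\ref{f21cap})(\ref{f21bar})(\ref{f21Y}) from the single value ${ \mathfrak{m}}^\ast_{{ \mathfrak{l}}}({ \mathfrak{l}}_{(\textit{\bf Y},\textit{\bf N})})={ \mathfrak{l}}$, so that ${ \mathfrak{l}}\in\overline{ \mathfrak{E}}_{ \mathfrak{S}}$ together with the closure properties (\ref{axiomreduc2})--(\ref{axiomreduc4}) forces the whole image into $\overline{ \mathfrak{E}}_{ \mathfrak{S}}$. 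The paper's proof is a one-line appeal to this reduction, while you have spelled out explicitly how ${ \mathfrak{l}}_{(\textit{\bf Y},\textit{\bf N})}$ and ${ \mathfrak{Y}}_{{ \mathfrak{E}}_{ \mathfrak{B}}}$ generate ${ \mathfrak{E}}_{ \mathfrak{B}}$; the substance is identical.
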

\begin{proof}
From the defining equation (\ref{mastu}), we deduce immediately the property (\ref{defrealmorphism}) as soon as the considered measurement is a real measurement.
\end{proof}

\subsection{Determinism vs.  Indeterminism}\label{particularspacesofstates}

In this subsection, we will consider a space of states, denoted  ${ \mathfrak{S}}$, which admits a real structure $(\overline{ \mathfrak{S}},\star)$.

\begin{definition}\label{deterministicmeasurement}
A real measurement ${\mathfrak{m}}\in \overline{\mathfrak{M}}_{{}_{{ \mathfrak{S}}}}$ satisfying 
\begin{eqnarray}
{\mathfrak{m}}(\overline{ \mathfrak{S}}{}^{{}^{pure}}) &\subseteq & \{\textit{\bf Y},\textit{\bf N}\}
\end{eqnarray}
is said to be {\em a deterministic real measurement}.
\end{definition}

\begin{definition}\label{deterministicspace}
The space of states ${ \mathfrak{S}}$ equipped with its real structure $(\overline{ \mathfrak{S}},\star)$ is said to be {\em a deterministic space of states} (or {\em a simplex space of states}) iff ${\mathfrak{m}}_{{ \mathfrak{l}}}$ is a deterministic real measurement for every ${{ \mathfrak{l}}} \in  \{\,{ \mathfrak{l}}_{(\sigma,\sigma^\star)}\;\vert\; \sigma\in \overline{ \mathfrak{S}}{}^{{}^{pure}}\,\}$.
\begin{eqnarray}
(\overline{ \mathfrak{S}},\star)\;\textit{\rm deterministic} & \Leftrightarrow & \forall \sigma\in \overline{ \mathfrak{S}}{}^{{}^{pure}},\; {\mathfrak{m}}_{{ \mathfrak{l}}_{(\sigma,\sigma^\star)}}(\overline{ \mathfrak{S}}{}^{{}^{pure}}) \subseteq  \{\textit{\bf Y},\textit{\bf N}\}
\end{eqnarray}
\end{definition}

\begin{lemma}\label{lemmaUsimplex}
The space of states ${ \mathfrak{S}}$ equipped with its real structure $(\overline{ \mathfrak{S}},\star)$ is a deterministic space of states iff 
\begin{eqnarray}
\forall \sigma \in \overline{ \mathfrak{S}},\existunique \;U_\sigma\subseteq \overline{ \mathfrak{S}}{}^{{}^{pure}} & \vert & \sigma=\bigsqcap{}^{{}^{\overline{ \mathfrak{S}}}}U_\sigma.\label{simplexdecompunique}
\end{eqnarray}
(we note that necessarily $U_\sigma=\underline{\sigma}_{{}_{\overline{ \mathfrak{S}}}}$)
\end{lemma}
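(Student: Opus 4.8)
The plan is to reformulate determinism as a purely order‑theoretic statement about $\star$ and the pure states, and then to prove the two implications separately. Throughout I may assume $\overline{ \mathfrak{S}}\neq\{\bot_{{}_{ \mathfrak{S}}}\}$, the remaining case being immediate; then $\overline{ \mathfrak{S}}$ has no greatest element, since a maximum $T\neq\bot_{{}_{ \mathfrak{S}}}$ would (with itself) be a common upper bound of $T$ and $T^\star$, contradicting (\ref{starcomplement}); hence $\bigsqcap{}^{{}^{\overline{ \mathfrak{S}}}}\varnothing$ is undefined and, by (\ref{completemeetirreducibleordergeneratingS}), $\overline{ \mathfrak{S}}{}^{{}^{pure}}$ has at least two elements. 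Since (\ref{completemeetirreducibleordergeneratingS}) gives $\sigma=\bigsqcap{}^{{}^{\overline{ \mathfrak{S}}}}\underline{\sigma}_{{}_{\overline{ \mathfrak{S}}}}$ for every $\sigma\in\overline{ \mathfrak{S}}$, while any $U\subseteq\overline{ \mathfrak{S}}{}^{{}^{pure}}$ with $\bigsqcap{}^{{}^{\overline{ \mathfrak{S}}}}U=\sigma$ satisfies $U\subseteq\underline{\sigma}_{{}_{\overline{ \mathfrak{S}}}}$, the only content of (\ref{simplexdecompunique}) is the \emph{uniqueness} of $U_\sigma$, the unique witness being necessarily $\underline{\sigma}_{{}_{\overline{ \mathfrak{S}}}}$. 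The reformulation is: for pure $\sigma$ and $\tau\in\overline{ \mathfrak{S}}{}^{{}^{pure}}$ one has ${\mathfrak{m}}_{{ \mathfrak{l}_{(\sigma,\sigma^\star)}}}(\tau)={ \epsilon}^{ \mathfrak{S}}_{ \mathfrak{l}_{(\sigma,\sigma^\star)} }(\tau)$ by (\ref{measurementeffect}), and by (\ref{defepsilonS}) this lies in $\{\textit{\bf Y},\textit{\bf N}\}$ iff $\sigma\sqsubseteq_{{}_{{ \mathfrak{S}}}}\tau$ or $\sigma^\star\sqsubseteq_{{}_{{ \mathfrak{S}}}}\tau$ (not both, else $\tau$ is a common upper bound of $\sigma,\sigma^\star$); for pure $\tau$ these read $\tau=\sigma$, respectively $\tau\in\underline{\sigma^\star}_{{}_{\overline{ \mathfrak{S}}}}$, and moreover $\sigma\notin\underline{\sigma^\star}_{{}_{\overline{ \mathfrak{S}}}}$ by (\ref{starcomplement}). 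Hence $(\overline{ \mathfrak{S}},\star)$ is deterministic iff $\underline{\sigma^\star}_{{}_{\overline{ \mathfrak{S}}}}=\overline{ \mathfrak{S}}{}^{{}^{pure}}\smallsetminus\{\sigma\}$ for every pure $\sigma$, equivalently, applying (\ref{completemeetirreducibleordergeneratingS}) to $\sigma^\star$, iff $\sigma^\star=\bigsqcap{}^{{}^{\overline{ \mathfrak{S}}}}(\overline{ \mathfrak{S}}{}^{{}^{pure}}\smallsetminus\{\sigma\})$ for every pure $\sigma$.

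For ``deterministic $\Rightarrow$ (\ref{simplexdecompunique})'' I would argue by contradiction. Suppose $\bigsqcap{}^{{}^{\overline{ \mathfrak{S}}}}U=\bigsqcap{}^{{}^{\overline{ \mathfrak{S}}}}U'=\sigma$ with $U,U'\subseteq\overline{ \mathfrak{S}}{}^{{}^{pure}}$ and $U\neq U'$; pick $\omega\in U\smallsetminus U'$. From $\omega\in U$ we get $\sigma\sqsubseteq_{{}_{\overline{ \mathfrak{S}}}}\omega$, while $U'\subseteq\overline{ \mathfrak{S}}{}^{{}^{pure}}\smallsetminus\{\omega\}$ together with antitonicity of the infimum and the reformulation gives $\sigma=\bigsqcap{}^{{}^{\overline{ \mathfrak{S}}}}U'\sqsupseteq_{{}_{\overline{ \mathfrak{S}}}}\bigsqcap{}^{{}^{\overline{ \mathfrak{S}}}}(\overline{ \mathfrak{S}}{}^{{}^{pure}}\smallsetminus\{\omega\})=\omega^\star$. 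Thus $\omega^\star\sqsubseteq_{{}_{\overline{ \mathfrak{S}}}}\sigma\sqsubseteq_{{}_{\overline{ \mathfrak{S}}}}\omega$, so $\omega$ is a common upper bound of $\omega$ and $\omega^\star$, contradicting (\ref{starcomplement}). Hence $U=U'$, and existence being already secured, $U_\sigma=\underline{\sigma}_{{}_{\overline{ \mathfrak{S}}}}$ is unique.

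For the converse I would fix a pure $\omega$ and set $\nu:=\bigsqcap{}^{{}^{\overline{ \mathfrak{S}}}}(\overline{ \mathfrak{S}}{}^{{}^{pure}}\smallsetminus\{\omega\})$ (well defined, as $\overline{ \mathfrak{S}}{}^{{}^{pure}}\smallsetminus\{\omega\}\neq\varnothing$). Since $\omega\notin\underline{\omega^\star}_{{}_{\overline{ \mathfrak{S}}}}$ we have $\underline{\omega^\star}_{{}_{\overline{ \mathfrak{S}}}}\subseteq\overline{ \mathfrak{S}}{}^{{}^{pure}}\smallsetminus\{\omega\}$, whence by (\ref{completemeetirreducibleordergeneratingS}) and antitonicity $\omega^\star=\bigsqcap{}^{{}^{\overline{ \mathfrak{S}}}}\underline{\omega^\star}_{{}_{\overline{ \mathfrak{S}}}}\sqsupseteq_{{}_{\overline{ \mathfrak{S}}}}\nu$. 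If $\nu=\bot_{{}_{ \mathfrak{S}}}$, then $\overline{ \mathfrak{S}}{}^{{}^{pure}}\smallsetminus\{\omega\}$ and $\overline{ \mathfrak{S}}{}^{{}^{pure}}=\underline{\bot_{{}_{ \mathfrak{S}}}}_{{}_{\overline{ \mathfrak{S}}}}$ are two distinct subsets of $\overline{ \mathfrak{S}}{}^{{}^{pure}}$ with infimum $\bot_{{}_{ \mathfrak{S}}}$ (the second by (\ref{completemeetirreducibleordergeneratingS}) at $\bot_{{}_{ \mathfrak{S}}}$), contradicting the assumed uniqueness of $U_{\bot_{{}_{ \mathfrak{S}}}}$. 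Hence $\nu\neq\bot_{{}_{ \mathfrak{S}}}$, and applying the order‑reversing involution $\star$ (see (\ref{orderreversing}), (\ref{involutive})) to $\nu\sqsubseteq_{{}_{\overline{ \mathfrak{S}}}}\omega^\star$ gives $\omega=(\omega^\star)^\star\sqsubseteq_{{}_{\overline{ \mathfrak{S}}}}\nu^\star$; maximality of $\omega$ forces $\nu^\star=\omega$, so $\nu=\omega^\star$. Therefore $\sigma^\star=\bigsqcap{}^{{}^{\overline{ \mathfrak{S}}}}(\overline{ \mathfrak{S}}{}^{{}^{pure}}\smallsetminus\{\sigma\})$ holds for every pure $\sigma$, which by the reformulation is precisely determinism.

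I expect the one genuinely nontrivial point to be the reformulation of determinism in terms of $\star$ and the pure states, and, within the converse, the observation that the order‑reversing involution together with maximality of $\omega$ pins $\nu$ down to be either $\bot_{{}_{ \mathfrak{S}}}$ or $\omega^\star$; once that dichotomy is available, uniqueness excludes the first option and determinism drops out. Everything else — existence of the pure decomposition from (\ref{completemeetirreducibleordergeneratingS}), antitonicity of infima, and the bookkeeping for the degenerate case $\overline{ \mathfrak{S}}=\{\bot_{{}_{ \mathfrak{S}}}\}$ — is routine, using only the real‑structure axioms (\ref{completemeetirreducibleordergeneratingS})–(\ref{starcomplement}) and the down‑completeness of $\overline{ \mathfrak{S}}$.
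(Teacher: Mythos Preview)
Your proof is correct and takes a genuinely different route from the paper's. For the forward direction, the paper first isolates an intermediate three-state obstruction (a pure $\sigma_3\sqsupseteq\sigma_1\sqcap\sigma_2$ with $\sigma_3\not\sqsupseteq\sigma_1,\sigma_2$ and $\sigma_1\parallel\sigma_2$) that determinism forbids, and then extracts such a triple from any failure of uniqueness by partitioning a minimal decomposition; your reformulation of determinism as $\sigma^\star=\bigsqcap{}^{{}^{\overline{\mathfrak{S}}}}(\overline{\mathfrak{S}}{}^{{}^{pure}}\smallsetminus\{\sigma\})$ for pure $\sigma$ bypasses this detour and lets you obtain $\omega^\star\sqsubseteq\sigma\sqsubseteq\omega$ directly from any $\omega\in U\smallsetminus U'$. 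For the converse, the paper invokes $U_{\bot_{{}_{\mathfrak{S}}}}=U_\sigma\cup U_{\sigma^\star}$ (which uses implicitly that $\sigma\sqcap\sigma^\star=\bot_{{}_{\mathfrak{S}}}$, a consequence of (\ref{orderreversing}) and (\ref{starcomplement})), whereas you combine the order-reversing involution with maximality of $\omega$ to force $\nu=\omega^\star$ once uniqueness excludes $\nu=\bot_{{}_{\mathfrak{S}}}$. Your argument is more self-contained and slightly cleaner for this lemma in isolation; the paper's route has the advantage that its intermediate three-state obstruction is precisely the characterization of non-determinism stated and reused in the lemma immediately following.
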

\begin{proof}
Let us consider that ${ \mathfrak{S}}$ equipped with its real structure $(\overline{ \mathfrak{S}},\star)$ is a deterministic space of states, and let us suppose that there exists $\sigma_1,\sigma_2\in \overline{ \mathfrak{S}}\smallsetminus \{\bot_{{}_{{ \mathfrak{S}}}}\}$ and $\sigma_3\in \overline{ \mathfrak{S}}{}^{{}^{pure}}$ such that 
\begin{eqnarray}
\sigma_3 \sqsupseteq_{{}_{\overline{ \mathfrak{S}}}} (\sigma_1\sqcap_{{}_{\overline{ \mathfrak{S}}}}\sigma_2)\;\;\textit{\rm and} \;\; \sigma_3\not\sqsupseteq_{{}_{\overline{ \mathfrak{S}}}}\sigma_1\;\;\textit{\rm and}\;\;   \sigma_3\not\sqsupseteq_{{}_{\overline{ \mathfrak{S}}}}\sigma_2\;\;\textit{\rm and}\;\;  \sigma_1\parallel_{{}_{\overline{ \mathfrak{S}}}}\sigma_2.\label{simplexdecompuniqueproof1}
\end{eqnarray}
Let us exhibit a contradiction. As long as $\sigma_3\not\sqsupseteq_{{}_{\overline{ \mathfrak{S}}}}\sigma_1$ we know that $\forall \omega\in \underline{\sigma_1}_{{}_{\overline{ \mathfrak{S}}}},  \omega\not\sqsupseteq_{{}_{\overline{ \mathfrak{S}}}}\sigma_3$ and then ${\mathfrak{m}}_{{ \mathfrak{l}}_{(\sigma_3,\sigma_3^\star)}}(\omega)\not=\textit{\bf Y}$ and then ${\mathfrak{m}}_{{ \mathfrak{l}}_{(\sigma_3,\sigma_3^\star)}}(\omega)=\textit{\bf N}$ (here we use ${\mathfrak{m}}_{{ \mathfrak{l}}_{(\sigma_3,\sigma_3^\star)}}(\overline{ \mathfrak{S}}{}^{{}^{pure}}) \subseteq  \{\textit{\bf Y},\textit{\bf N}\}$). As a consequence, using the homomorphic property of the measurement, we obtain ${\mathfrak{m}}_{{ \mathfrak{l}}_{(\sigma_3,\sigma_3^\star)}}(\sigma_1)=\textit{\bf N}$, or in other words $\sigma_1\sqsupseteq_{{}_{\overline{ \mathfrak{S}}}}\sigma_3^\star$. Analogously, we obtain $\sigma_2\sqsupseteq_{{}_{\overline{ \mathfrak{S}}}}\sigma_3^\star$. As a result, we obtain $(\sigma_1\sqcap_{{}_{\overline{ \mathfrak{S}}}}\sigma_2)\sqsupseteq_{{}_{\overline{ \mathfrak{S}}}}\sigma_3^\star$. But we have assumed $\sigma_3 \sqsupseteq_{{}_{\overline{ \mathfrak{S}}}} (\sigma_1\sqcap_{{}_{\overline{ \mathfrak{S}}}}\sigma_2)$ which leads to the announced contradiction.  We have then proved the negation of the property (\ref{simplexdecompuniqueproof1}).\\
Let us now imagine that the property (\ref{simplexdecompunique}) is not satisfied.  Then, there exists a subset $U$ of $\underline{\sigma}_{{}_{\overline{ \mathfrak{S}}}}$ which is minimal for the inclusion among the subsets of $\underline{\sigma}_{{}_{\overline{ \mathfrak{S}}}}$ satisfying $\sigma=\bigsqcap{}^{{}^{\mathfrak{S}}} U$, and there exists $\sigma'$ element of $\underline{\sigma}_{{}_{\overline{ \mathfrak{S}}}} \smallsetminus U$.\\ Let us then introduce $U_1$ and $U_2$ such that $U_1\cap U_2=\varnothing$ and $U_1\cup U_2=U$. Now, we define $\sigma_1:=\bigsqcap{}^{{}^{\mathfrak{S}}} U_1$, $\sigma_2:=\bigsqcap{}^{{}^{\mathfrak{S}}} U_2$ and $\sigma_3:=\sigma'$. They satisfy (\ref{simplexdecompuniqueproof1}) which is contradictory. Then, we have proved the property (\ref{simplexdecompunique}).\\
Reciprocally, let us suppose that the property (\ref{simplexdecompunique}) is satisfied.  We have then for any $\sigma\in \overline{ \mathfrak{S}}\smallsetminus \{\bot_{{}_{{ \mathfrak{S}}}}\}$ the basic result $U_{\bot_{{}_{{ \mathfrak{S}}}}}=U_{\sigma} \cup U_{\sigma^\star}$, which leads to the property ${\mathfrak{m}}_{{ \mathfrak{l}}_{(\sigma,\sigma^\star)}}(\overline{ \mathfrak{S}}{}^{{}^{pure}}) \subseteq  \{\textit{\bf Y},\textit{\bf N}\}$. This concludes the proof.
\end{proof}
\begin{lemma}
$(\overline{ \mathfrak{S}},\star)$ is deterministic iff $(\overline{ \mathfrak{S}}{}^{{}^{pure}},\underline{\perp})$ is completely reducible, i.e. iff
\begin{eqnarray}
\forall A\subseteq \overline{ \mathfrak{S}}{}^{{}^{pure}}\;\vert\; A=A^{\underline{\perp}\underline{\perp}},&\textit{\rm we have}& A\cup A^{\underline{\perp}}=\overline{ \mathfrak{S}}{}^{{}^{pure}},\;\;\;\;A\cap A^{\underline{\perp}}=\varnothing.
\end{eqnarray}
Here, the operator $(\cdot)^{\underline{\perp}}$ designates the operator from $\mathcal{P}(\overline{ \mathfrak{S}}{}^{{}^{pure}})$ to itself which maps $A$ to $\{\sigma\in \overline{ \mathfrak{S}}{}^{{}^{pure}}\;\vert\; \forall \sigma'\in A, \sigma\underline{\perp}\sigma'\}$.
\end{lemma}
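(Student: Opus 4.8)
The plan is to replace the orthogonality relation $\underline{\perp}$, on pure states, by an order-theoretic condition. First I would check that for all $\sigma,\sigma'\in\overline{ \mathfrak{S}}\smallsetminus\{\bot_{{}_{{ \mathfrak{S}}}}\}$ one has $\sigma\,\underline{\perp}\,\sigma'\Leftrightarrow(\sigma')^{\star}\sqsubseteq_{{}_{\overline{ \mathfrak{S}}}}\sigma$: if $\omega$ witnesses $\sigma\,\underline{\perp}\,\sigma'$, then applying $\star$ to $\omega^{\star}\sqsubseteq_{{}_{\overline{ \mathfrak{S}}}}\sigma'$ and using (\ref{involutive})--(\ref{orderreversing}) gives $(\sigma')^{\star}\sqsubseteq_{{}_{\overline{ \mathfrak{S}}}}\omega\sqsubseteq_{{}_{\overline{ \mathfrak{S}}}}\sigma$, and conversely $\omega:=(\sigma')^{\star}$ is a witness. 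Since $\overline{ \mathfrak{S}}{}^{{}^{pure}}=Max(\overline{ \mathfrak{S}})$, this gives $\{\sigma\}^{\underline{\perp}}=\underline{\sigma^{\star}}_{{}_{\overline{ \mathfrak{S}}}}$ for every $\sigma\in\overline{ \mathfrak{S}}{}^{{}^{pure}}$, and $\sigma\notin\{\sigma\}^{\underline{\perp}}$ by (\ref{starcomplement}); in particular $\underline{\perp}$ is irreflexive on $\overline{ \mathfrak{S}}{}^{{}^{pure}}$, so $A\cap A^{\underline{\perp}}=\varnothing$ for every $A$, and only the condition $A\cup A^{\underline{\perp}}=\overline{ \mathfrak{S}}{}^{{}^{pure}}$ (for orthoclosed $A$) carries content. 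Secondly, unfolding Definition \ref{deterministicspace} with (\ref{measurementeffect}) and (\ref{defepsilonS}), and using that distinct maximal elements of $\overline{ \mathfrak{S}}$ are incomparable, I would record that $(\overline{ \mathfrak{S}},\star)$ is deterministic iff $\sigma^{\star}\sqsubseteq_{{}_{\overline{ \mathfrak{S}}}}\tau$ for all $\sigma,\tau\in\overline{ \mathfrak{S}}{}^{{}^{pure}}$ with $\tau\neq\sigma$, equivalently (by the first point) iff $\{\sigma\}^{\underline{\perp}}=\overline{ \mathfrak{S}}{}^{{}^{pure}}\smallsetminus\{\sigma\}$ for every pure $\sigma$ (this is consistent with the criterion (\ref{simplexdecompunique}) of Lemma \ref{lemmaUsimplex}).

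For ``deterministic $\Rightarrow$ completely reducible'' I would then simply note that, in that case, for every nonempty $A\subseteq\overline{ \mathfrak{S}}{}^{{}^{pure}}$ one has $A^{\underline{\perp}}=\bigcap_{\sigma\in A}\{\sigma\}^{\underline{\perp}}=\overline{ \mathfrak{S}}{}^{{}^{pure}}\smallsetminus A$ (and $\varnothing^{\underline{\perp}}=\overline{ \mathfrak{S}}{}^{{}^{pure}}$), so that $A\mapsto A^{\underline{\perp}}$ is ordinary set complementation; hence every subset of $\overline{ \mathfrak{S}}{}^{{}^{pure}}$ is orthoclosed and satisfies $A\cup A^{\underline{\perp}}=\overline{ \mathfrak{S}}{}^{{}^{pure}}$, $A\cap A^{\underline{\perp}}=\varnothing$, which is complete reducibility.

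For the converse I would argue by contraposition. Assume $(\overline{ \mathfrak{S}},\star)$ is not deterministic and pick pure states $\sigma_{0}\neq\tau$ with $\sigma_{0}^{\star}\not\sqsubseteq_{{}_{\overline{ \mathfrak{S}}}}\tau$. Set $A:=\{\sigma_{0}\}^{\underline{\perp}}=\underline{\sigma_{0}^{\star}}_{{}_{\overline{ \mathfrak{S}}}}$; this is orthoclosed, because it is a $\underline{\perp}$-image and the identity (\ref{ortho3=ortho}) holds verbatim for $\overline{ \mathfrak{S}}{}^{{}^{pure}}$ equipped with $\underline{\perp}$. Moreover $A\neq\varnothing$: by (\ref{completemeetirreducibleordergeneratingS}) applied to $\sigma_{0}^{\star}\neq\bot_{{}_{{ \mathfrak{S}}}}$ we have $\sigma_{0}^{\star}=\bigsqcap{}^{{}^{\overline{ \mathfrak{S}}}}\underline{\sigma_{0}^{\star}}_{{}_{\overline{ \mathfrak{S}}}}$, and if $\underline{\sigma_{0}^{\star}}_{{}_{\overline{ \mathfrak{S}}}}$ were empty then $\sigma_{0}^{\star}$ would be a top element of $\overline{ \mathfrak{S}}$, contradicting (\ref{starcomplement}). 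Now, for $\rho\in\overline{ \mathfrak{S}}{}^{{}^{pure}}$, using the first point twice and (\ref{completemeetirreducibleordergeneratingS}),
\begin{eqnarray}
&& \rho\in A^{\underline{\perp}}\ \Leftrightarrow\ \big(\forall\,\mu\in\underline{\sigma_{0}^{\star}}_{{}_{\overline{ \mathfrak{S}}}},\ \rho^{\star}\sqsubseteq_{{}_{\overline{ \mathfrak{S}}}}\mu\big)\ \Leftrightarrow\ \rho^{\star}\sqsubseteq_{{}_{\overline{ \mathfrak{S}}}}\bigsqcap{}^{{}^{\overline{ \mathfrak{S}}}}\underline{\sigma_{0}^{\star}}_{{}_{\overline{ \mathfrak{S}}}}=\sigma_{0}^{\star}\ \Leftrightarrow\ \rho\sqsupseteq_{{}_{\overline{ \mathfrak{S}}}}\sigma_{0},\nonumber
\end{eqnarray}
and maximality of $\sigma_{0}$ forces $\rho=\sigma_{0}$, so $A^{\underline{\perp}}=\{\sigma_{0}\}$. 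Hence $A\cup A^{\underline{\perp}}=\underline{\sigma_{0}^{\star}}_{{}_{\overline{ \mathfrak{S}}}}\cup\{\sigma_{0}\}$, which does not contain $\tau$; so $A$ is orthoclosed with $A\cup A^{\underline{\perp}}\neq\overline{ \mathfrak{S}}{}^{{}^{pure}}$, i.e. $(\overline{ \mathfrak{S}}{}^{{}^{pure}},\underline{\perp})$ is not completely reducible.

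The one genuinely delicate step is the chain of equivalences for $A^{\underline{\perp}}$: it requires turning the family of inequalities $\rho^{\star}\sqsubseteq_{{}_{\overline{ \mathfrak{S}}}}\mu$ into a single inequality against $\bigsqcap{}^{{}^{\overline{ \mathfrak{S}}}}\underline{\sigma_{0}^{\star}}_{{}_{\overline{ \mathfrak{S}}}}$ and then identifying that meet with $\sigma_{0}^{\star}$ --- which is exactly where the axiom (\ref{completemeetirreducibleordergeneratingS}) (``$\overline{ \mathfrak{S}}$ is generated by its maximal elements'') is used, together with the observation, itself forced by (\ref{starcomplement}), that $\overline{ \mathfrak{S}}$ has no top element so that $\underline{\sigma_{0}^{\star}}_{{}_{\overline{ \mathfrak{S}}}}\neq\varnothing$. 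Everything else is a routine unwinding of the definitions.
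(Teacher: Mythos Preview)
Your proof is correct. The paper's own argument is extremely terse: it simply cites the equivalence, established inside the proof of Lemma~\ref{lemmaUsimplex}, that $(\overline{\mathfrak{S}},\star)$ is deterministic iff ${\mathfrak{m}}_{{\mathfrak{l}}_{(\sigma,\sigma^\star)}}(\overline{\mathfrak{S}}{}^{{}^{pure}})\subseteq\{\textit{\bf Y},\textit{\bf N}\}$ for \emph{every} $\sigma\in\overline{\mathfrak{S}}\smallsetminus\{\bot\}$ (not only pure $\sigma$), and asserts that the result follows. Implicitly this is because every orthoclosed subset of $\overline{\mathfrak{S}}{}^{{}^{pure}}$ has the form $\underline{\sigma}_{\overline{\mathfrak{S}}}$ with $(\underline{\sigma}_{\overline{\mathfrak{S}}})^{\underline{\perp}}=\underline{\sigma^\star}_{\overline{\mathfrak{S}}}$, so complete reducibility reads $\underline{\sigma}_{\overline{\mathfrak{S}}}\cup\underline{\sigma^\star}_{\overline{\mathfrak{S}}}=\overline{\mathfrak{S}}{}^{{}^{pure}}$ for all $\sigma$, which is exactly the cited condition. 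Your route is slightly different and more self-contained: for the forward direction you observe that determinism forces $(\cdot)^{\underline{\perp}}$ to coincide with ordinary set complementation on $\overline{\mathfrak{S}}{}^{{}^{pure}}$, whence every subset is orthoclosed and the reducibility conditions are automatic; for the converse you bypass the global description of orthoclosed sets and instead exhibit a single witness $A=\{\sigma_0\}^{\underline{\perp}}$ whose bi-orthogonal closure and complement you compute directly via (\ref{completemeetirreducibleordergeneratingS}). This buys you an argument that uses only the defining condition of determinism for pure $\sigma$, without invoking the extension to all $\sigma$ obtained in Lemma~\ref{lemmaUsimplex}.
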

\begin{proof}
We note that, at the end of the proof of Lemma \ref{lemmaUsimplex}, we have shown the following result
\begin{eqnarray}
\hspace{-1cm}(\overline{ \mathfrak{S}},\star)\;\textit{\rm deterministic} & \Leftrightarrow & \forall \sigma\in \overline{ \mathfrak{S}}\smallsetminus \{\bot_{{}_{{ \mathfrak{S}}}}\},\; {\mathfrak{m}}_{{ \mathfrak{l}}_{(\sigma,\sigma^\star)}}(\overline{ \mathfrak{S}}{}^{{}^{pure}}) \subseteq  \{\textit{\bf Y},\textit{\bf N}\}
\end{eqnarray}
the announced result is then obtained.
\end{proof}
\begin{lemma}
We have also obtained the following characterization
\begin{eqnarray}
\hspace{-1cm}(\overline{ \mathfrak{S}},\star)\;\textit{\rm NOT deterministic} & \Leftrightarrow &\exists\sigma_1,\sigma_2\in \overline{ \mathfrak{S}}\smallsetminus \{\bot_{{}_{{ \mathfrak{S}}}}\},\exists\sigma_3\in \overline{ \mathfrak{S}}{}^{{}^{pure}}\;\;\vert\;\; \label{threestatesnotdeterministic}\\
&&\sigma_3 \sqsupseteq_{{}_{\overline{ \mathfrak{S}}}} (\sigma_1\sqcap_{{}_{\overline{ \mathfrak{S}}}}\sigma_2)\;\;\textit{\rm and} \;\; \sigma_3\not\sqsupseteq_{{}_{\overline{ \mathfrak{S}}}}\sigma_1\;\;\textit{\rm and}\;\;   \sigma_3\not\sqsupseteq_{{}_{\overline{ \mathfrak{S}}}}\sigma_2\;\;\textit{\rm and}\;\;  \sigma_1\parallel_{{}_{\overline{ \mathfrak{S}}}}\sigma_2.\nonumber
\end{eqnarray}
\end{lemma}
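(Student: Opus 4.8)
The plan is to derive the equivalence (\ref{threestatesnotdeterministic}) purely by contraposing what has already been established, rather than by any fresh construction: everything needed is contained in the proof of Lemma \ref{lemmaUsimplex} together with Lemma \ref{lemmaUsimplex} itself.

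First I would recall the two facts isolated during the proof of Lemma \ref{lemmaUsimplex}. On the one hand, assuming $(\overline{\mathfrak{S}},\star)$ deterministic, the opening part of that proof shows that \emph{no} triple $\sigma_1,\sigma_2\in\overline{\mathfrak{S}}\smallsetminus\{\bot_{{}_{\mathfrak{S}}}\}$, $\sigma_3\in\overline{\mathfrak{S}}{}^{{}^{pure}}$ can satisfy (\ref{simplexdecompuniqueproof1}) --- the argument there applies the homomorphic measurement ${\mathfrak{m}}_{{ \mathfrak{l}}_{(\sigma_3,\sigma_3^\star)}}$ to the pure states below $\sigma_1$ and below $\sigma_2$, concludes $\sigma_1,\sigma_2\sqsupseteq_{{}_{\overline{\mathfrak{S}}}}\sigma_3^\star$, and hence reaches the contradiction $\sigma_3\sqsupseteq_{{}_{\overline{\mathfrak{S}}}}\sigma_3^\star$. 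Contraposing: the mere existence of a triple satisfying (\ref{simplexdecompuniqueproof1}) forces $(\overline{\mathfrak{S}},\star)$ not to be deterministic. On the other hand, the closing part of that proof shows that if the uniqueness property (\ref{simplexdecompunique}) fails, one can choose a non-minimal generating subset $U\subseteq\underline{\sigma}_{{}_{\overline{\mathfrak{S}}}}$ together with an extra element $\sigma'\in\underline{\sigma}_{{}_{\overline{\mathfrak{S}}}}\smallsetminus U$, partition $U=U_1\cup U_2$ with $U_1\cap U_2=\varnothing$, and set $(\sigma_1,\sigma_2,\sigma_3):=(\bigsqcap{}^{{}^{\overline{\mathfrak{S}}}}U_1,\bigsqcap{}^{{}^{\overline{\mathfrak{S}}}}U_2,\sigma')$, obtaining a triple that does satisfy (\ref{simplexdecompuniqueproof1}).

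Then I would assemble the equivalence. For the implication from left to right in (\ref{threestatesnotdeterministic}): if $(\overline{\mathfrak{S}},\star)$ is not deterministic, then by Lemma \ref{lemmaUsimplex} the uniqueness property (\ref{simplexdecompunique}) fails, so the partition construction of the previous paragraph yields the required $\sigma_1,\sigma_2,\sigma_3$. For the converse implication: a triple satisfying the right-hand side of (\ref{threestatesnotdeterministic}) is literally a triple satisfying (\ref{simplexdecompuniqueproof1}), so the contraposed first fact gives that $(\overline{\mathfrak{S}},\star)$ is not deterministic. The only point demanding any care is purely cosmetic --- checking that the right-hand side of (\ref{threestatesnotdeterministic}) is verbatim the configuration (\ref{simplexdecompuniqueproof1}), including the incomparability clause $\sigma_1\parallel_{{}_{\overline{\mathfrak{S}}}}\sigma_2$ --- and I do not expect any genuine obstacle, since the statement is a repackaging of results already in hand.
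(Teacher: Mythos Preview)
Your proposal is correct and follows essentially the same approach as the paper, which simply states that the result ``has already been proved along the proof of Lemma \ref{lemmaUsimplex}.'' You have spelled out in detail exactly how to extract both implications from that earlier proof --- the contraposition of the first paragraph and the partition construction of the second --- which is precisely what the paper leaves implicit.
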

\begin{proof}
This result has already been proved along the proof of Lemma \ref{lemmaUsimplex}.
\end{proof}

%\begin{definition}
%We will define the notion of superselection rule as follows :
%\begin{eqnarray}
%&&\forall \sigma_1,\sigma_2\in \overline{ \mathfrak{S}}{}^{{}^{pure}},\label{defssr}\\
%&&\sigma_1 \;\wr\;\sigma_2 \;\; :\Leftrightarrow \;\; (\,\forall \sigma\in \overline{ \mathfrak{S}}{}^{{}^{pure}}, \sigma\sqsupseteq_{{}_{\overline{ \mathfrak{S}}}}(\sigma_1\sqcap_{{}_{\overline{ \mathfrak{S}}}}\sigma_2)\Rightarrow (\sigma=\sigma_1\;\textit{\rm or}\;\sigma=\sigma_2)\,)\nonumber
%\end{eqnarray}
%Obviously, we have
%\begin{eqnarray}
%(\overline{ \mathfrak{S}},\star)\;\textit{\rm  deterministic} & \Rightarrow & (\,\forall \sigma_1,\sigma_2\in \overline{ \mathfrak{S}}{}^{{}^{pure}}, \sigma_1 \;\wr\;\sigma_2\,).
%\end{eqnarray}
%\end{definition}

\begin{theorem}
Let us suppose that $\overline{ \mathfrak{S}}$ is a simplex space of states and let $\star$ be defined by
\begin{eqnarray}
\forall \sigma\in \overline{ \mathfrak{S}},&&\sigma^\star := \bigsqcap{}^{{}^{\overline{ \mathfrak{S}}}}_{\alpha\in \overline{ \mathfrak{S}}{}^{{}^{pure}}\!\!\!\!\smallsetminus \underline{\sigma}_{{}_{\overline{ \mathfrak{S}}}}} \alpha.\label{starsimplex}
\end{eqnarray}
Then, $(\overline{ \mathfrak{S}}, \star)$ is a real structure of $\overline{ \mathfrak{S}}$.\\
Conversely, let us consider ${ \mathfrak{S}}$ a space of states which admits a real structure $(\overline{ \mathfrak{S}}, \star)$. Let us suppose that $\overline{ \mathfrak{S}}$ is a simplex space of states.  Then, we have necessarily ${ \mathfrak{S}}=\overline{ \mathfrak{S}}$.
\end{theorem}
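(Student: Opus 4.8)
\emph{The direct implication.} Here the ambient space is $\overline{ \mathfrak{S}}$ itself, so the plan is to verify, one by one, the four clauses of Definition \ref{definitionstarstructure} for the pair $(\overline{ \mathfrak{S}},\star)$ with $\star$ given by \eqref{starsimplex}. The inclusion clause ($\bot_{{}_{ \mathfrak{S}}}\in\overline{ \mathfrak{S}}$, closure under arbitrary infima) is vacuous, and the ``generated by its maximal elements'' clause is precisely the simplex hypothesis as repackaged by Lemma \ref{lemmaUsimplex}. The substance is then \eqref{involutive}--\eqref{starcomplement} and the separation requirement \eqref{realChuseparated}, and I would obtain all of them from the power-set picture supplied by Lemma \ref{lemmaUsimplex}: the assignment $\sigma\mapsto\underline{\sigma}_{{}_{\overline{ \mathfrak{S}}}}$ is an order anti-isomorphism of $\overline{ \mathfrak{S}}$ onto a sub-poset of $({\mathcal{P}}(\overline{ \mathfrak{S}}{}^{{}^{pure}}),\supseteq)$ carrying $\sqsubseteq_{{}_{\overline{ \mathfrak{S}}}}$ to $\supseteq$, $\bigsqcap{}^{{}^{\overline{ \mathfrak{S}}}}$ to union, and---by the very formula \eqref{starsimplex}---$\sigma^{\star}$ to the set-theoretic complement $\overline{ \mathfrak{S}}{}^{{}^{pure}}\smallsetminus\underline{\sigma}_{{}_{\overline{ \mathfrak{S}}}}$. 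Under this dictionary $\sigma\neq\bot_{{}_{ \mathfrak{S}}}$ corresponds to ``$\underline{\sigma}_{{}_{\overline{ \mathfrak{S}}}}$ a proper nonempty subset'', a condition stable under complementation, so $\star$ indeed maps $\overline{ \mathfrak{S}}\smallsetminus\{\bot_{{}_{ \mathfrak{S}}}\}$ into itself; \eqref{involutive} is the double-complement identity together with uniqueness of the pure decomposition; \eqref{orderreversing} is that complementation reverses $\supseteq$; and \eqref{starcomplement} holds because a common upper bound of $\sigma$ and $\sigma^{\star}$ would correspond to a subset contained in $\underline{\sigma}_{{}_{\overline{ \mathfrak{S}}}}\cap(\overline{ \mathfrak{S}}{}^{{}^{pure}}\smallsetminus\underline{\sigma}_{{}_{\overline{ \mathfrak{S}}}})=\varnothing$, impossible since every real state has nonempty $\underline{\cdot}_{{}_{\overline{ \mathfrak{S}}}}$. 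For \eqref{realChuseparated}: given $\sigma\neq\sigma'$ pick a pure $\omega$ in the symmetric difference of $\underline{\sigma}_{{}_{\overline{ \mathfrak{S}}}}$ and $\underline{\sigma'}_{{}_{\overline{ \mathfrak{S}}}}$, say $\omega\sqsupseteq_{{}_{\overline{ \mathfrak{S}}}}\sigma$ and $\omega\not\sqsupseteq_{{}_{\overline{ \mathfrak{S}}}}\sigma'$; then ${ \mathfrak{l}}_{(\omega,\omega^{\star})}$ is a real effect (it has the required shape since $\omega^{\star}\sqsupseteq_{{}_{\overline{ \mathfrak{S}}}}\omega^{\star}$) and takes the value $\textbf{Y}$ on $\sigma$ but not on $\sigma'$. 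This gives that $(\overline{ \mathfrak{S}},\star)$ is a real structure of $\overline{ \mathfrak{S}}$.

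\emph{The converse.} Assume ${ \mathfrak{S}}$ carries a real structure $(\overline{ \mathfrak{S}},\star)$ with $\overline{ \mathfrak{S}}$ a simplex, and suppose for contradiction that there is a hidden state $\zeta\in{ \mathfrak{S}}\smallsetminus\overline{ \mathfrak{S}}$; the goal is to contradict \eqref{realChuseparated}. Inspecting the shapes of the elements of \eqref{reducedspaceofeffects}, the value of any real effect on a state $\xi$ depends only on the down-trace $(\downarrow_{{}_{ \mathfrak{S}}}\xi)\cap\overline{ \mathfrak{S}}$, so \eqref{realChuseparated} is equivalent to injectivity of $\xi\mapsto(\downarrow_{{}_{ \mathfrak{S}}}\xi)\cap\overline{ \mathfrak{S}}$. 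Put $D:=(\downarrow_{{}_{ \mathfrak{S}}}\zeta)\cap\overline{ \mathfrak{S}}$ (nonempty, containing $\bot_{{}_{ \mathfrak{S}}}$, and bounded above by $\zeta$) and $\omega:=\bigsqcup{}^{{}^{ \mathfrak{S}}}D$. First, $\omega\sqsubseteq_{{}_{ \mathfrak{S}}}\zeta$ and in fact $\omega=\zeta$: any real effect evaluating to $\textbf{Y}$ (resp. $\textbf{N}$) on $\zeta$ has its ``yes-state'' (resp. ``no-state''), which is a real state, lying in $D$, hence below $\omega$, hence it evaluates the same way on $\omega$; so $\zeta$ and $\omega$ agree on all real effects, and \eqref{realChuseparated} forces $\zeta=\omega$. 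Second, I would show $\omega\in\overline{ \mathfrak{S}}$: via the power-set picture of the first part, $D$ is a subfamily of $({\mathcal{P}}(\overline{ \mathfrak{S}}{}^{{}^{pure}}),\supseteq)$ closed under unions, so $\bigsqcup{}^{{}^{ \mathfrak{S}}}D$ should correspond to the intersection of the associated subsets, which is nonempty (it contains $\underline{\bot_{{}_{ \mathfrak{S}}}}{}_{{}_{\overline{ \mathfrak{S}}}}$... more precisely it is the image of a genuine real state) and thus names an element of $\overline{ \mathfrak{S}}$; equivalently, the inf-closedness \eqref{axiomreduc2} of $\overline{ \mathfrak{E}}_{ \mathfrak{S}}$ keeps the relevant joins of compatible real states inside $\overline{ \mathfrak{S}}$. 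Combining, $\zeta=\omega\in\overline{ \mathfrak{S}}$, the desired contradiction; hence ${ \mathfrak{S}}=\overline{ \mathfrak{S}}$. (An alternative packaging of the same argument is to feed ${ \mathfrak{l}}\mapsto{\epsilon}^{ \mathfrak{S}}_{ \mathfrak{l}}(\zeta)$, ${ \mathfrak{l}}\in\overline{ \mathfrak{E}}_{ \mathfrak{S}}$, into the reconstruction of Theorem \ref{blepsilonsigma} applied to the States/Effects Chu space $({ \mathfrak{S}},\overline{ \mathfrak{E}}_{ \mathfrak{S}},\epsilon^{ \mathfrak{S}})$, whose hypotheses \eqref{theorembl2}--\eqref{theorembl4} are supplied by \eqref{axiomEinfsemilattice}, \eqref{etbar}, \eqref{ety}; the element ${ \mathfrak{l}}_B$ it produces lies in $\overline{ \mathfrak{E}}_{ \mathfrak{S}}$, has a non-$\centerdot$ first component because all effects in its defining family have first component $\sqsubseteq_{{}_{ \mathfrak{S}}}\zeta$, hence by membership in \eqref{reducedspaceofeffects} that first component is a real state, which is then identified with $\zeta$.)

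\emph{Where the difficulty lies.} The only genuinely non-formal step is ``$\bigsqcup{}^{{}^{ \mathfrak{S}}}\big((\downarrow_{{}_{ \mathfrak{S}}}\zeta)\cap\overline{ \mathfrak{S}}\big)$ is real'', i.e. that the reconstruction cannot leak out of $\overline{ \mathfrak{S}}$. This is exactly the point where the simplex hypothesis is indispensable: for a general, indeterministic real structure the supremum in ${ \mathfrak{S}}$ of two compatible real states can itself be a hidden state, so that argument would fail; it is the Boolean/power-set picture established in the first part ($\sigma\mapsto\underline{\sigma}_{{}_{\overline{ \mathfrak{S}}}}$, $\star=$ complement) that forces the relevant joins to remain real. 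A secondary technicality to dispatch is the good behaviour of $\Theta^{\overline{ \mathfrak{S}}}$ --- in particular that maximal real states sit above the real states below $\zeta$, so that the degenerate case ($\Theta^{\overline{ \mathfrak{S}}}(\zeta)$ a singleton, forcing $\zeta$ real directly via the characterization \eqref{threestatesnotdeterministic}) is handled cleanly; this is where the ``generated by maximal elements'' clause (and, if it turns out to be needed, a finite-rank/finiteness assumption) enters.
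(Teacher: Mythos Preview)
Your treatment of the direct implication is fine and matches the paper's ``trivial to check''; the power-set dictionary is exactly the right picture.

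For the converse, however, your key step ``$\omega:=\bigsqcup^{{ \mathfrak{S}}}D\in\overline{ \mathfrak{S}}$'' has a genuine gap. You argue that $\bigsqcup^{{ \mathfrak{S}}}D$ ``should correspond to the intersection of the associated subsets, which is nonempty''. But the power-set picture describes only $\overline{ \mathfrak{S}}$; it says nothing about how suprema in the larger space ${ \mathfrak{S}}$ behave, and a supremum \emph{in ${ \mathfrak{S}}$} of real states is precisely the kind of thing that can be hidden. Worse, your nonemptiness claim is false in the situation at hand: the intersection $\bigcap_{\sigma\in D}\underline{\sigma}_{{}_{\overline{ \mathfrak{S}}}}$ is the set of pure real upper bounds of $D$, and if $\zeta$ is genuinely hidden then any two distinct elements of $\Theta^{\overline{ \mathfrak{S}}}(\zeta)$ have $\neg\,\widehat{\sigma\sigma'}{}^{{}^{\overline{ \mathfrak{S}}}}$ (by maximality), i.e.\ $\underline{\sigma}\cap\underline{\sigma'}=\varnothing$, so the intersection is empty. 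Your alternative via Theorem~\ref{blepsilonsigma} does not close the gap either: that theorem is stated for the full effect space ${ \mathfrak{E}}_{ \mathfrak{S}}$, and passing to $\overline{ \mathfrak{E}}_{ \mathfrak{S}}$ runs into the same issue (the first component of ${ \mathfrak{l}}_B$, computed by \eqref{defcapES} as a supremum \emph{in ${ \mathfrak{S}}$}, is not obviously real).

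The paper's argument is shorter and sidesteps all of this. It uses one observation you never make explicit: in a simplex, $\neg\,\widehat{\sigma\sigma'}{}^{{}^{\overline{ \mathfrak{S}}}}$ means $\underline{\sigma}_{{}_{\overline{ \mathfrak{S}}}}\cap\underline{\sigma'}_{{}_{\overline{ \mathfrak{S}}}}=\varnothing$, hence $\underline{\sigma'}_{{}_{\overline{ \mathfrak{S}}}}\subseteq\overline{ \mathfrak{S}}{}^{{}^{pure}}\smallsetminus\underline{\sigma}_{{}_{\overline{ \mathfrak{S}}}}=\underline{\sigma^{\star}}_{{}_{\overline{ \mathfrak{S}}}}$, i.e.\ $\sigma^{\star}\sqsubseteq_{{}_{\overline{ \mathfrak{S}}}}\sigma'$. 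Therefore any common upper bound of $\sigma,\sigma'$ in ${ \mathfrak{S}}$ would also bound $\sigma$ and $\sigma^{\star}$, contradicting \eqref{starcomplement}. Applied to two distinct elements of $\Theta^{\overline{ \mathfrak{S}}}(\zeta)$ (which exist by your own separation argument when $\zeta\notin\overline{ \mathfrak{S}}$), this gives the contradiction directly. That single line---``simplex $\Rightarrow$ incompatibility in $\overline{ \mathfrak{S}}$ forces $\sigma^{\star}\sqsubseteq\sigma'$, hence incompatibility in ${ \mathfrak{S}}$ by \eqref{starcomplement}''---is the missing idea in your argument.
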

\begin{proof}
As long as $\overline{\mathfrak{S}}$ is a simplex space of states, the star map $\star$ defined on the real structure of ${\mathfrak{S}}$ is necessarily given by the formula (\ref{starsimplex}).\\
The first part of the theorem is trivial to check. \\
Let us now assume that the real structure of ${ \mathfrak{S}}$ is a simplex space of states and let us prove that ${ \mathfrak{S}}=\overline{ \mathfrak{S}}$.\\
Let us then consider two elements $\sigma$ and $\sigma'$ in $\overline{\mathfrak{S}}$ such that $\neg \;\widehat{\sigma \sigma'}{}^{{}^{\overline{\mathfrak{S}}}}$.  Let us suppose that $\widehat{\sigma \sigma'}{}^{{}^{{\mathfrak{S}}}}$.\\
As long as $\overline{\mathfrak{S}}$ is a simplex, we must necessarily have $\sigma^\star \sqsubseteq_{{}_{\overline{\mathfrak{S}}}} \sigma'$.  As a consequence, we have $\widehat{\sigma\,\sigma^\star}{}^{{}^{{\mathfrak{S}}}}$ which contradicts the condition $\neg\; \widehat{\sigma\sigma^\star}{}^{{}^{{ \mathfrak{S}}}}$ imposed for a real structure. \\ 
As a conclusion,  ${ \mathfrak{S}}\smallsetminus \overline{\mathfrak{S}}$ is empty.
\end{proof}

\begin{definition}
Traditionally, an Inf semi-lattice ${ \mathfrak{S}}$ is said to be {\em distributive} iff 
\begin{eqnarray}
&&\hspace{-2cm} \forall \sigma,\sigma_1,\sigma_2\in { \mathfrak{S}}\;\vert\; \sigma\not= \sigma_1,\sigma_2 ,\;\;\;\;\;\; (\sigma_1 \sqcap_{{}_{ \mathfrak{S}}} \sigma_2) \sqsubseteq_{{}_{ \mathfrak{S}}}  \sigma  \Rightarrow \nonumber\\ 
&&\exists \sigma'_1,\sigma'_2\in { \mathfrak{S}}\;\vert\; (\,\sigma_1 \sqsubseteq_{{}_{ \mathfrak{S}}} \sigma'_1,\;\;\; \sigma_2 \sqsubseteq_{{}_{ \mathfrak{S}}} \sigma'_2\;\;\;\textit{\rm and}\;\;\; \sigma = \sigma'_1 \sqcap_{{}_{ \mathfrak{S}}} \sigma'_2\,).\;\;\;\;\;\;\;\;\;\;\;\;
\end{eqnarray} 
\end{definition}

\begin{lemma}
A simplex space of states is necessarily distributive as an Inf semi-lattice. 
\end{lemma}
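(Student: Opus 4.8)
The plan is to use the structure theorem for simplex spaces of states (Lemma~\ref{lemmaUsimplex}) to transport the problem to a power set, where distributivity is trivial. Concretely, since $\overline{\mathfrak{S}}$ is a simplex, Lemma~\ref{lemmaUsimplex} provides, for each $\sigma\in\overline{\mathfrak{S}}$, that $\sigma=\bigsqcap^{\overline{\mathfrak{S}}}\underline{\sigma}_{\overline{\mathfrak{S}}}$ and that $\underline{\sigma}_{\overline{\mathfrak{S}}}$ is the \emph{unique} subset of $\overline{\mathfrak{S}}{}^{pure}$ with this infimum. Abbreviating $\underline{\sigma}:=\underline{\sigma}_{\overline{\mathfrak{S}}}$, I would first record three facts. (i) $\sigma_1\sqsubseteq_{\overline{\mathfrak{S}}}\sigma_2\Leftrightarrow\underline{\sigma_2}\subseteq\underline{\sigma_1}$: forward, a pure state above $\sigma_2$ lies above $\sigma_1$; backward, $\sigma_1=\bigsqcap\underline{\sigma_1}\sqsubseteq\bigsqcap\underline{\sigma_2}=\sigma_2$. (ii) $\underline{\sigma_1\sqcap_{\overline{\mathfrak{S}}}\sigma_2}=\underline{\sigma_1}\cup\underline{\sigma_2}$ (and more generally $\underline{\bigsqcap_i\sigma_i}=\bigcup_i\underline{\sigma_i}$): the inclusion $\supseteq$ is (i), while $\subseteq$ follows from the uniqueness clause of Lemma~\ref{lemmaUsimplex} applied to $\sigma_1\sqcap_{\overline{\mathfrak{S}}}\sigma_2$, since $\underline{\sigma_1}\cup\underline{\sigma_2}$ is a subset of $\overline{\mathfrak{S}}{}^{pure}$ whose infimum is $\sigma_1\sqcap_{\overline{\mathfrak{S}}}\sigma_2$ (one may also argue directly: it is trivial when $\sigma_1,\sigma_2$ are comparable or one equals $\bot_{\mathfrak{S}}$, and when $\sigma_1\parallel_{\overline{\mathfrak{S}}}\sigma_2$ it is exactly the negation of criterion~(\ref{threestatesnotdeterministic})). (iii) The real-structure axiom closing $\overline{\mathfrak{S}}$ under $\bigsqcap^{\mathfrak{S}}$ of \emph{every} subset, applied to $\varnothing$, endows $\overline{\mathfrak{S}}$ with a greatest element, so $\bigsqcap^{\overline{\mathfrak{S}}}$ of an arbitrary, possibly empty, subset of pure states is legitimate.

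With this in hand the verification is short. Let $\sigma,\sigma_1,\sigma_2\in\overline{\mathfrak{S}}$ with $\sigma_1\sqcap_{\overline{\mathfrak{S}}}\sigma_2\sqsubseteq_{\overline{\mathfrak{S}}}\sigma$; by (i) and (ii) this says $\underline{\sigma}\subseteq\underline{\sigma_1}\cup\underline{\sigma_2}$. Put $\sigma'_1:=\bigsqcap^{\overline{\mathfrak{S}}}(\underline{\sigma}\cap\underline{\sigma_1})$ and $\sigma'_2:=\bigsqcap^{\overline{\mathfrak{S}}}(\underline{\sigma}\cap\underline{\sigma_2})$. Since $\underline{\sigma}\cap\underline{\sigma_i}\subseteq\underline{\sigma_i}$, monotonicity of the infimum gives $\sigma_i=\bigsqcap\underline{\sigma_i}\sqsubseteq_{\overline{\mathfrak{S}}}\sigma'_i$; and associativity of infima together with $\underline{\sigma}\subseteq\underline{\sigma_1}\cup\underline{\sigma_2}$ yields
\[
\sigma'_1\sqcap_{\overline{\mathfrak{S}}}\sigma'_2=\bigsqcap{}^{\overline{\mathfrak{S}}}\!\big((\underline{\sigma}\cap\underline{\sigma_1})\cup(\underline{\sigma}\cap\underline{\sigma_2})\big)=\bigsqcap{}^{\overline{\mathfrak{S}}}\!\big(\underline{\sigma}\cap(\underline{\sigma_1}\cup\underline{\sigma_2})\big)=\bigsqcap{}^{\overline{\mathfrak{S}}}\underline{\sigma}=\sigma,
\]
which is precisely the decomposition required by the definition of distributivity. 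Abstractly, (i), (ii) and the observation that every subset $A\subseteq\overline{\mathfrak{S}}{}^{pure}$ equals $\underline{\bigsqcap^{\overline{\mathfrak{S}}}A}$ (because $\overline{\mathfrak{S}}{}^{pure}=Max(\overline{\mathfrak{S}})$, so $\underline{\alpha}=\{\alpha\}$ for pure $\alpha$) exhibit $\sigma\mapsto\underline{\sigma}$ as an order-anti-isomorphism $\overline{\mathfrak{S}}\cong(\mathcal{P}(\overline{\mathfrak{S}}{}^{pure}),\supseteq)$; distributivity of $\overline{\mathfrak{S}}$ is thus inherited from that of the power set, the computation above being the explicit witness.

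The main point deserving care — and the only genuine obstacle — is the degenerate bookkeeping: when $\underline{\sigma}\cap\underline{\sigma_i}=\varnothing$ (equivalently $\sigma_{3-i}\sqsubseteq_{\overline{\mathfrak{S}}}\sigma$), the element $\sigma'_i$ is the greatest element of $\overline{\mathfrak{S}}$, and one must appeal to fact (iii) to keep the construction inside $\overline{\mathfrak{S}}$; with that observation the displayed identity holds verbatim. I would also remark that the side condition $\sigma\neq\sigma_1,\sigma_2$ in the definition of distributivity is not actually used: the construction produces admissible $\sigma'_1,\sigma'_2$ in all cases. Everything else is a routine manipulation of infima and of the correspondence $\sigma\leftrightarrow\underline{\sigma}$.
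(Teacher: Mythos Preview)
Your proposal is correct and takes essentially the same approach as the paper: you define $\sigma'_i := \bigsqcap^{\overline{\mathfrak{S}}}(\underline{\sigma}\cap\underline{\sigma_i})$ exactly as the paper does and verify the same three conditions. Your write-up is more thorough—you spell out facts (i)--(iii), justify $\underline{\sigma_1\sqcap\sigma_2}=\underline{\sigma_1}\cup\underline{\sigma_2}$ via the uniqueness clause of Lemma~\ref{lemmaUsimplex}, and address the empty-intersection edge case—while the paper simply asserts ``we check immediately''; but the underlying construction and logic are identical.
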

\begin{proof}
Let us first assume that ${\mathfrak{S}}$ is a simplex space of states. 
Let us consider $\sigma,\sigma_1,\sigma_2\in { \mathfrak{S}}$ such that $\sigma\not= \sigma_1,\sigma_2 $ and $ (\sigma_1 \sqcap_{{}_{ \mathfrak{S}}} \sigma_2) \sqsubseteq_{{}_{ \mathfrak{S}}}  \sigma$. We have then $\underline{\sigma}_{{}_{{ \mathfrak{S}}}} \subseteq (\underline{\sigma_1}_{{}_{{ \mathfrak{S}}}}\cup \underline{\sigma_2}_{{}_{{ \mathfrak{S}}}})$. If we define $\sigma'_1:=\bigsqcap{}^{{}^{\mathfrak{S}}}(\underline{\sigma_1}_{{}_{{ \mathfrak{S}}}}\cap \underline{\sigma}_{{}_{{ \mathfrak{S}}}})$ and $\sigma'_2:=\bigsqcap{}^{{}^{\mathfrak{S}}}(\underline{\sigma_2}_{{}_{{ \mathfrak{S}}}}\cap \underline{\sigma}_{{}_{{ \mathfrak{S}}}})$ we check immediately that $\sigma_1 \sqsubseteq_{{}_{ \mathfrak{S}}} \sigma'_1,\;\;\; \sigma_2 \sqsubseteq_{{}_{ \mathfrak{S}}} \sigma'_2$ and $\sigma = \sigma'_1 \sqcap_{{}_{ \mathfrak{S}}} \sigma'_2$. As a result, ${\mathfrak{S}}$ is distributive.
\end{proof}

Let us now exhibit the two simplest examples of simplex spaces of states called respectively ${ \mathfrak{Z}}_2 \cong {\mathfrak{B}}$ and ${ \mathfrak{Z}}_3$ (they are given with their star operation):
\begin{eqnarray}
\begin{tikzpicture}
  \node (d) at (-1,0) {$u$};
  \node (f) at (1,0) {$u^\star$};
  \node (min) at (0,-1) {$\bot$};
  \draw (min) -- (d)  (min) -- (f) ;
\end{tikzpicture} 
&&\begin{tikzpicture}
  \node (a) at (-0.5,1) {$u_3^\star$};
  \node (b) at (0,2) {$u_2^\star$};
  \node (c) at (0.5,1) {$u_1^\star$};
  \node (e) at (0,0) {$u_2$};
  \node (g) at (-2,0) {$u_1$};
  \node (h) at (2,0) {$u_3$};
  \node (min) at (0,-1) {$\bot$};
  \draw (min) -- (g) (min) -- (h)
  (min) -- (e) (a) -- (g) (c) -- (h) (g) -- (b) -- (h) (a) -- (e) -- (c);
\end{tikzpicture} 
\end{eqnarray}

Although many type of indeterministic theories can be build, we fix a principle that will describe a rather general class of them. This choice is principally motivated by Lemma \ref{lemmairreducibility} and Lemma \ref{theoremirreducibilitytensor}.

\begin{definition}\label{completelyindeterministicspace}
The space of states ${ \mathfrak{S}}$ equipped with its real structure $(\overline{ \mathfrak{S}},\star)$ is said to be {\em a completely indeterministic space of states} iff,  
\begin{eqnarray}  
\forall \sigma,\lambda\in \overline{ \mathfrak{S}}{}^{{}^{pure}}\;\vert\; \sigma\sqsupseteq_{{}_{ \mathfrak{S}}}\lambda^\star, && \exists \kappa \in \overline{ \mathfrak{S}}{}^{{}^{pure}}\;\vert\; \kappa\not\sqsupseteq_{{}_{ \mathfrak{S}}}\sigma^\star\;\textit{\rm and}\; \kappa\not\sqsupseteq_{{}_{ \mathfrak{S}}}\lambda^\star.
\end{eqnarray}
\end{definition}

%\begin{remark}
%The motivation for such a definition can be found in the Lemma \ref{lemmairreducibility} and in the Theorem \ref{theoremirreducibilitytensor}.
%\end{remark}

\begin{lemma}\label{lemmairreducibility}
If $(\overline{ \mathfrak{S}},\star)$ is completely indeterministic then $(\overline{ \mathfrak{S}}{}^{{}^{pure}},\underline{\perp})$ is irreducible, i.e.  
\begin{eqnarray}
\nexists A\subseteq \overline{ \mathfrak{S}}{}^{{}^{pure}}\;\vert\; (\, A=A^{\underline{\perp}\underline{\perp}},\;\;\;\;\varnothing \varsubsetneq A \varsubsetneq \overline{ \mathfrak{S}}{}^{{}^{pure}},\;\;\;\; A\cup A^{\underline{\perp}}=\overline{ \mathfrak{S}}{}^{{}^{pure}},\;\;\;\;A\cap A^{\underline{\perp}}=\varnothing\,).\;\;\;\;\;\;\;\;
\end{eqnarray}
Here again, the operator $(\cdot)^{\underline{\perp}}$ designates the operator from $\mathcal{P}(\overline{ \mathfrak{S}}{}^{{}^{pure}})$ to itself which maps $A$ to $\{\sigma\in \overline{ \mathfrak{S}}{}^{{}^{pure}}\;\vert\; \forall \sigma'\in A, \sigma\underline{\perp}\sigma'\}$.
\end{lemma}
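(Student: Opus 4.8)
The plan is to first translate the orthogonality relation $\underline{\perp}$, restricted to pure states, into an order inequality involving $\star$, and then to run a short contradiction argument in which the complete-indeterminacy hypothesis is used exactly once, on a pair of pure states chosen one from $A$ and one from $A^{\underline{\perp}}$.

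\emph{Reformulation of $\underline{\perp}$ on $\overline{\mathfrak{S}}^{pure}$.} First I would verify that for $\sigma,\sigma'\in\overline{\mathfrak{S}}^{pure}$,
\begin{eqnarray}
\sigma\,\underline{\perp}\,\sigma' & \Longleftrightarrow & \sigma'\sqsupseteq_{\mathfrak{S}}\sigma^\star\;\Longleftrightarrow\;\sigma\sqsupseteq_{\mathfrak{S}}(\sigma')^\star.\nonumber
\end{eqnarray}
The implication ``$\Leftarrow$'' is immediate from the definition~(\ref{definorthoS}) of $\underline{\perp}$ on choosing the witness $\omega:=\sigma$ (any pure state is distinct from $\bot_{\mathfrak{S}}$; the degenerate case $\overline{\mathfrak{S}}=\{\bot_{\mathfrak{S}}\}$ makes the Lemma vacuous). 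For ``$\Rightarrow$'', if $\omega\in\overline{\mathfrak{S}}\smallsetminus\{\bot_{\mathfrak{S}}\}$ witnesses $\sigma\,\underline{\perp}\,\sigma'$, then $\omega\sqsubseteq_{\overline{\mathfrak{S}}}\sigma$ yields $\sigma^\star\sqsubseteq_{\overline{\mathfrak{S}}}\omega^\star$ by the order-reversal property~(\ref{orderreversing}), whence $\sigma'\sqsupseteq_{\mathfrak{S}}\omega^\star\sqsupseteq_{\mathfrak{S}}\sigma^\star$ by transitivity. The equivalence of the two right-hand conditions — and therefore the symmetry of $\underline{\perp}$ on pure states — follows by applying $\star$ and using~(\ref{involutive}) and~(\ref{orderreversing}). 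Note that under this reformulation the conditions ``$\kappa\not\sqsupseteq_{\mathfrak{S}}\sigma^\star$'' and ``$\kappa\not\sqsupseteq_{\mathfrak{S}}\lambda^\star$'' in Definition~\ref{completelyindeterministicspace} say exactly that $\kappa$ is orthogonal neither to $\sigma$ nor to $\lambda$.

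\emph{The contradiction.} Suppose, for contradiction, that some $A\subseteq\overline{\mathfrak{S}}^{pure}$ satisfies the four listed properties. Then $A\neq\varnothing$, and from $A\varsubsetneq\overline{\mathfrak{S}}^{pure}$ together with $A\cup A^{\underline{\perp}}=\overline{\mathfrak{S}}^{pure}$ we also get $A^{\underline{\perp}}\neq\varnothing$. Pick $\sigma\in A$ and $\lambda\in A^{\underline{\perp}}$; then $\lambda\,\underline{\perp}\,\sigma$, i.e. $\sigma\sqsupseteq_{\mathfrak{S}}\lambda^\star$ by the reformulation, so the complete-indeterminacy hypothesis applies to the pair $(\sigma,\lambda)$ and produces $\kappa\in\overline{\mathfrak{S}}^{pure}$ with $\kappa\not\sqsupseteq_{\mathfrak{S}}\sigma^\star$ and $\kappa\not\sqsupseteq_{\mathfrak{S}}\lambda^\star$, i.e. $\kappa$ is orthogonal neither to $\sigma$ nor to $\lambda$. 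But $\kappa\in A\cup A^{\underline{\perp}}$: if $\kappa\in A$ then $\kappa\,\underline{\perp}\,\lambda$ (because $\lambda\in A^{\underline{\perp}}$ and $\underline{\perp}$ is symmetric on pure states), while if $\kappa\in A^{\underline{\perp}}$ then $\kappa\,\underline{\perp}\,\sigma$ (because $\sigma\in A$). Either alternative contradicts the choice of $\kappa$, so no such $A$ exists and $(\overline{\mathfrak{S}}^{pure},\underline{\perp})$ is irreducible.

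I do not expect a real obstacle: the entire content sits in the reformulation step, i.e. in observing that for a pure $\sigma$ the existential witness in~(\ref{definorthoS}) may always be taken to be $\sigma$ itself, and that the relation is then symmetric on pure states. After that, the argument is a two-element combinatorial check that uses only the hypotheses $A\neq\varnothing$, $A^{\underline{\perp}}\neq\varnothing$ and $A\cup A^{\underline{\perp}}=\overline{\mathfrak{S}}^{pure}$; in particular the conditions $A=A^{\underline{\perp}\underline{\perp}}$ and $A\cap A^{\underline{\perp}}=\varnothing$ are not needed, so the statement in fact holds under weaker hypotheses on $A$.
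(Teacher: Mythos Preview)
Your proof is correct and follows essentially the same route as the paper's: pick $\sigma\in A$ and $\lambda\in A^{\underline{\perp}}$, invoke complete indeterminacy on this orthogonal pair to obtain $\kappa$ orthogonal to neither, and then use $A\cup A^{\underline{\perp}}=\overline{\mathfrak{S}}^{pure}$ to reach a contradiction. You are slightly more explicit than the paper in spelling out the equivalence $\sigma\,\underline{\perp}\,\sigma'\Leftrightarrow\sigma'\sqsupseteq_{\mathfrak{S}}\sigma^\star$ on pure states (and hence the symmetry of $\underline{\perp}$ there), which the paper's argument uses tacitly; your closing remark that $A=A^{\underline{\perp}\underline{\perp}}$ and $A\cap A^{\underline{\perp}}=\varnothing$ are not actually needed is also correct.
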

\begin{proof}
Let us consider a completely indeterministic $(\overline{ \mathfrak{S}},\star)$. Let us suppose that $(\overline{ \mathfrak{S}}{}^{{}^{pure}},\underline{\perp})$ is reducible and let us exhibit a contradiction. Let us then consider $A\subseteq \overline{ \mathfrak{S}}{}^{{}^{pure}}$ such that $A=A^{\underline{\perp}\underline{\perp}}$ and $\varnothing \varsubsetneq A \varsubsetneq \overline{ \mathfrak{S}}{}^{{}^{pure}}$ and $A\cup A^{\underline{\perp}}=\overline{ \mathfrak{S}}{}^{{}^{pure}}$ and $A\cap A^{\underline{\perp}}=\varnothing$. Let us fix $\sigma$ an element of $A$. For any $\omega$ satisfying $\omega\not\!\!\underline{\perp} \sigma$ we cannot have $\omega\in A^{\underline{\perp}}$ and then, by assumption, $\omega\in A$. Let us now consider $\lambda \in A^{\underline{\perp}}$. As long as $\lambda\underline{\perp}\sigma$ and $(\overline{ \mathfrak{S}},\star)$ being completely indeterministic, there exists $\kappa\in \overline{ \mathfrak{S}}{}^{{}^{pure}}$ such that $\kappa\not\!\!\!\underline{\perp}\sigma$ and $\kappa\not\!\!\!\underline{\perp}\lambda$. From $\kappa\not\!\!\!\underline{\perp}\sigma$ we deduce that $\kappa\in A$. However, from $\kappa\not\!\!\!\underline{\perp}\lambda$ we deduce that $\lambda\notin A^{\underline{\perp}}$ which is contradictory. This concludes the proof of the assertion.
%\\ Let us now imagine that the orthomodular lattice ${{ \mathfrak{H}}}(\overline{\mathfrak{S}}, {\underline{\perp}})$ is reducible and let us exhibit a contradiction. Then, there would exist a proper central element $A$ i.e.  $\exists A\in {{ \mathfrak{H}}}(\overline{\mathfrak{S}}, {\underline{\perp}})$ such that for any $B\in {{ \mathfrak{H}}}(\overline{\mathfrak{S}}, {\underline{\perp}})$ we would have $B= (B \cap A) \vee (B \cap A^{\underline{\perp}})$. Hence, in particular, any atom would be either below $A$ or below $A^{\underline{\perp}}$. Then, $(\overline{ \mathfrak{S}}{}^{{}^{pure}},\underline{\perp})$ would be reducible, which is false. Hence, ${{ \mathfrak{H}}}(\overline{\mathfrak{S}}, {\underline{\perp}})$ is an irreducible orthomodular lattice.
\end{proof}

Although the previous notion of indeterminism is quite appealing, we will introduce another notion called linearity which emphasizes the notion of "superposition" of states in the notion of indeterminism.

\begin{definition}\label{linearindeterministicspace}
The space of states ${ \mathfrak{S}}$ equipped with its real structure $(\overline{ \mathfrak{S}},\star)$ is said to be {\em linear} iff, 
\begin{eqnarray}
&&\hspace{-1cm}\forall \sigma_1,\sigma_2\in \overline{ \mathfrak{S}}{}^{{}^{pure}}\;\vert\; %\sigma\sqsupseteq_{{}_{ \mathfrak{S}}}\lambda^\star,\;\;
(\sigma_1\sqcap_{{}_{\overline{ \mathfrak{S}}}}\sigma_2)\sqcoversubset_{{}_{\overline{ \mathfrak{S}}}}\sigma_1,\sigma_2, \nonumber\\
&& \exists \sigma_3 \in { \mathfrak{S}}\;\vert\; (\,\sigma_3\not\sqsupseteq_{{}_{{ \mathfrak{S}}}}\sigma_1\;\textit{\rm and}\; \sigma_3\not\sqsupseteq_{{}_{{ \mathfrak{S}}}}\sigma_2\;\textit{\rm and}\; \sigma_3\sqcoversupset_{{}_{{ \mathfrak{S}}}}(\sigma_1\sqcap_{{}_{\overline{ \mathfrak{S}}}}\sigma_2)\,).\;\;\;\;\;\;\;\;\;\;\;\;\;\;\;\;\;\;
\end{eqnarray}
\end{definition}

A generic class of linear and completely indeterministic spaces of states called ${ \mathfrak{Z}}'_N$ (with $N\geq 2$) is given as follows 
\begin{eqnarray}
\begin{tikzpicture}
  \node (a) at (-3,0) {$\sigma_1$};
   \node (e) at (-2,0) {$\cdots$};
  \node (b) at (-1,0) {$\sigma_N$};
  \node (c) at (1,0) {$\sigma_1^\star$};
   \node (f) at (2,0) {$\cdots$};
    \node (d) at (3,0) {$\sigma_N^\star$};
  \node (min) at (0,-1) {$\bot$};
  \draw (min) -- (a)  (min) -- (b) (min) -- (c)  (min) -- (d);
\end{tikzpicture} 
\end{eqnarray}
A space of states of this class will be said to be a {\em one-dimensional indeterministic space of states} (this space of states will be eventually called "quantum bit" space of states).

\section{Characterization of hidden states}\label{sectioncharacterizationhiddenstates}

In the present section, we intent to clarify the notion of real structure and in particular to explore the possibility to derive ${ \mathfrak{S}}$ as a "completion" of its real structure $(\overline{{ \mathfrak{S}}},\star)$, i.e. to build hidden states in a suitable completion of the space of real states.

\subsection{Preliminary remarks}\label{subsectionpreliminaryhidden}

Let us introduce some notations.\\
We will denote by ${ \mathcal{F}}(\overline{{ \mathfrak{S}}})$ the set of chain complete lower subsets in $\overline{{ \mathfrak{S}}}$. Explicitly,we have for any ${ \mathfrak{F}}\in { \mathcal{F}}(\overline{{ \mathfrak{S}}})$
\begin{eqnarray}
&&{ \mathfrak{F}}\subseteq \overline{{ \mathfrak{S}}},\\
&& \forall C\subseteq_{Chain}{ \mathfrak{F}}, (\bigsqcup{}^{{}^{\overline{{ \mathfrak{S}}}}} C)\in { \mathfrak{F}},\\
&& \forall \sigma,\sigma'\in \overline{{ \mathfrak{S}}}, (\sigma\in { \mathfrak{F}}\;\;\textit{\rm and}\;\;\sigma\sqsupseteq_{{}_{\overline{{ \mathfrak{S}}}}}\sigma')\Rightarrow (\sigma'\in { \mathfrak{F}}).
\end{eqnarray}
${ \mathcal{F}}(\overline{{ \mathfrak{S}}})$ equipped with $\subseteq$ is a poset, and equipped with $\cap$ is an Inf semi-lattice.\\
Due to Zorn lemma and the chain completeness of any element ${ \mathfrak{F}}$ of ${ \mathcal{F}}(\overline{{ \mathfrak{S}}})$, we have $Max({ \mathfrak{F}})\not=\varnothing$. 
We then define the map $\Omega$ sending any element of ${ \mathcal{F}}(\overline{{ \mathfrak{S}}})$ to its set of maximal elements :
\begin{eqnarray}
\begin{array}{lclcl}
\Omega & : & { \mathcal{F}}(\overline{{ \mathfrak{S}}}) & \longrightarrow & {\mathcal{P}}(\overline{{ \mathfrak{S}}})\\
& & { \mathfrak{F}} & \mapsto & Max({ \mathfrak{F}}).
\end{array}
\end{eqnarray}

We define on ${\mathcal{P}}(\overline{{ \mathfrak{S}}})$ a pre-order denoted $\sqsubseteq$ by
\begin{eqnarray}
\forall F,F'\in {\mathcal{P}}(\overline{{ \mathfrak{S}}}),&& F\sqsubseteq F' \;\;:\Leftrightarrow \;\; \forall f\in F,\exists f'\in F'\;\vert\; f \sqsubseteq_{{}_{\overline{{ \mathfrak{S}}}}} f'.\label{DefinitionpreorderPSbar}
\end{eqnarray}
Then $\Omega$ satisfies
\begin{lemma}
\begin{eqnarray}
\forall { \mathfrak{F}},{ \mathfrak{F}}'\in { \mathcal{F}}(\overline{{ \mathfrak{S}}}),&& { \mathfrak{F}}\subseteq { \mathfrak{F}}' \;\;\Rightarrow\;\; \Omega({ \mathfrak{F}})\sqsubseteq \Omega({ \mathfrak{F}}').
\end{eqnarray}
\end{lemma}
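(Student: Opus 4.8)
The plan is to unwind the definition of the preorder $\sqsubseteq$ on $\mathcal{P}(\overline{\mathfrak{S}})$ from \eqref{DefinitionpreorderPSbar}. Since $\Omega(\mathfrak{F}) = Max(\mathfrak{F})$ and $\Omega(\mathfrak{F}') = Max(\mathfrak{F}')$, establishing $\Omega(\mathfrak{F}) \sqsubseteq \Omega(\mathfrak{F}')$ reduces to showing that every maximal element $f$ of $\mathfrak{F}$ lies below some maximal element $f'$ of $\mathfrak{F}'$. As $\mathfrak{F} \subseteq \mathfrak{F}'$, such an $f$ is already an element of $\mathfrak{F}'$, so the whole content of the lemma is the familiar fact that in a poset in which every chain possesses an upper bound, each element lies below a maximal element — here applied to $\mathfrak{F}'$, whose chain-completeness is exactly the second defining clause of $\mathcal{F}(\overline{\mathfrak{S}})$.

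In detail, I would fix $f \in Max(\mathfrak{F}) \subseteq \mathfrak{F}'$ and pass to the up-set $\mathfrak{F}'_f := \{\, g \in \mathfrak{F}' \mid g \sqsupseteq_{\overline{\mathfrak{S}}} f \,\}$, which is nonempty since $f \in \mathfrak{F}'_f$. To invoke Zorn's lemma on $\mathfrak{F}'_f$, I would take a nonempty chain $C \subseteq \mathfrak{F}'_f$; being a chain in $\mathfrak{F}'$, its supremum $s := \bigsqcup^{\overline{\mathfrak{S}}} C$ lies in $\mathfrak{F}'$ by chain-completeness, and from $s \sqsupseteq_{\overline{\mathfrak{S}}} c \sqsupseteq_{\overline{\mathfrak{S}}} f$ for any $c \in C$ one gets $s \in \mathfrak{F}'_f$, an upper bound of $C$ inside $\mathfrak{F}'_f$. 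Zorn's lemma then yields a maximal element $f'$ of $\mathfrak{F}'_f$.

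The final step is to promote maximality in $\mathfrak{F}'_f$ to maximality in $\mathfrak{F}'$: if $g \in \mathfrak{F}'$ satisfies $g \sqsupseteq_{\overline{\mathfrak{S}}} f'$, then $g \sqsupseteq_{\overline{\mathfrak{S}}} f' \sqsupseteq_{\overline{\mathfrak{S}}} f$, so $g \in \mathfrak{F}'_f$ and therefore $g = f'$ by maximality of $f'$ there. Hence $f' \in Max(\mathfrak{F}') = \Omega(\mathfrak{F}')$ and $f \sqsubseteq_{\overline{\mathfrak{S}}} f'$ by construction; as $f$ was an arbitrary element of $\Omega(\mathfrak{F})$, this gives $\Omega(\mathfrak{F}) \sqsubseteq \Omega(\mathfrak{F}')$. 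I expect no serious obstacle: the only points needing attention are this passage from the up-set to $\mathfrak{F}'$ and the tacit existence of $\bigsqcup^{\overline{\mathfrak{S}}} C$ — which is part of what $\mathcal{F}(\overline{\mathfrak{S}})$ presupposes and, in any event, follows from $\overline{\mathfrak{S}}$ being closed under arbitrary infima in $\mathfrak{S}$ (so the supremum of a chain is the infimum of its upper bounds when these exist).
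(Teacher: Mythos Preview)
Your proof is correct and is essentially a spelled-out version of the paper's argument, which dismisses the lemma as ``Trivial.'' The paper had already remarked just before this lemma that Zorn's lemma together with chain-completeness guarantees $Max(\mathfrak{F}) \neq \varnothing$ for every $\mathfrak{F} \in \mathcal{F}(\overline{\mathfrak{S}})$, and your up-set argument is precisely the standard refinement of that observation needed here.
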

\begin{proof}
Trivial.
\end{proof}
\begin{lemma}
\begin{eqnarray}
\forall \sigma\in {{ \mathfrak{S}}},&& \{\,\sigma'\in \overline{{ \mathfrak{S}}}\;\vert\; \sigma'\sqsubseteq_{{}_{{{ \mathfrak{S}}}}}\sigma\,\} \in { \mathcal{F}}(\overline{{ \mathfrak{S}}}).
\end{eqnarray}
\end{lemma}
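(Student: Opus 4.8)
The plan is to verify, for the set $\mathfrak F:=\{\,\sigma'\in\overline{\mathfrak S}\mid\sigma'\sqsubseteq_{\mathfrak S}\sigma\,\}$, each of the three clauses that define membership in $\mathcal F(\overline{\mathfrak S})$: being a subset of $\overline{\mathfrak S}$, being a lower subset of $\overline{\mathfrak S}$, and being closed under suprema of chains computed in $\overline{\mathfrak S}$. The first clause is immediate. For the lower-set clause, recall that the order $\sqsubseteq_{\overline{\mathfrak S}}$ is the restriction of $\sqsubseteq_{\mathfrak S}$; so if $\tau\in\mathfrak F$ and $\tau'\in\overline{\mathfrak S}$ with $\tau'\sqsubseteq_{\overline{\mathfrak S}}\tau$, then transitivity gives $\tau'\sqsubseteq_{\mathfrak S}\tau\sqsubseteq_{\mathfrak S}\sigma$, whence $\tau'\in\mathfrak F$. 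These two steps are routine.

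The substantive clause is chain-completeness. Given a chain $C\subseteq\mathfrak F$, I would first note that $\sigma$ is an upper bound of $C$ in $\mathfrak S$; since $\mathfrak S$ is a down-complete Inf semi-lattice, the non-empty set of its $\mathfrak S$-upper bounds has an infimum, so $\bigsqcup^{\mathfrak S}C$ exists and $\bigsqcup^{\mathfrak S}C\sqsubseteq_{\mathfrak S}\sigma$. Next I would identify the supremum of $C$ taken in the sub-poset $\overline{\mathfrak S}$: a real state is an upper bound of $C$ iff it dominates $\bigsqcup^{\mathfrak S}C$, so the real upper bounds of $C$ form the set $R:=\overline{\mathfrak S}\cap(\uparrow^{\mathfrak S}(\bigsqcup^{\mathfrak S}C))$, which is non-empty because $\overline{\mathfrak S}$ is generated by its maximal elements. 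Since $\overline{\mathfrak S}$ is closed under infima taken in $\mathfrak S$, we get $\bigsqcap^{\mathfrak S}R\in\overline{\mathfrak S}$, and because every $c\in C$ lies below every member of $R$ this infimum is again an upper bound of $C$; hence $\bigsqcup^{\overline{\mathfrak S}}C=\bigsqcap^{\mathfrak S}R$, the least real upper bound of $C$. It then remains to show $\bigsqcap^{\mathfrak S}R\sqsubseteq_{\mathfrak S}\sigma$, i.e.\ that passing from $C$ to its supremum inside $\overline{\mathfrak S}$ does not leave the down-set of $\sigma$. Here I would invoke the characterisation of the order of $\mathfrak S$ by the real effects (property (\ref{realChuseparated})): a state is pinned down by the real states below it, so it suffices to check that every real state below $\bigsqcup^{\overline{\mathfrak S}}C$ is below $\sigma$, and this follows because the hidden states of $\mathfrak S$ occur as non-real infima of real states rather than as chain-limits overshooting pre-existing $\mathfrak S$-bounds, so in fact $\bigsqcup^{\mathfrak S}C=\bigsqcup^{\overline{\mathfrak S}}C$, which is $\sqsubseteq_{\mathfrak S}\sigma$. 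This places $\bigsqcup^{\overline{\mathfrak S}}C$ in $\mathfrak F$ and completes the verification.

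I expect the main obstacle to be exactly that last point — controlling the position of $\bigsqcup^{\overline{\mathfrak S}}C$ relative to $\sigma$ when $\sigma$ is a hidden state, equivalently proving $\bigsqcup^{\mathfrak S}C=\bigsqcup^{\overline{\mathfrak S}}C$ for a chain of real states admitting an $\mathfrak S$-upper bound. Everything else (the inclusion and lower-set clauses, and the identification of $\bigsqcup^{\overline{\mathfrak S}}C$ with $\bigsqcap^{\mathfrak S}R$) is straightforward bookkeeping with Definition \ref{definitionstarstructure} and the definition of $\mathcal F(\overline{\mathfrak S})$.
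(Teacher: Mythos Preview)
Your three-clause decomposition is correct, and the first two clauses are indeed routine. For chain-completeness, however, you take a different and longer route than the paper, and you miss its one key device: the star involution. The paper's entire argument is that chain-completeness of $\overline{\mathfrak S}$ follows from its down-completeness via the order-reversing involution $\star$ on $\overline{\mathfrak S}\setminus\{\bot_{\mathfrak S}\}$ --- for a chain $C$ one forms the chain $C^\star$, takes $\bigsqcap^{\overline{\mathfrak S}}C^\star$ (which exists since $\overline{\mathfrak S}$ is closed under infima in $\mathfrak S$), and applies $\star$ again to recover $\bigsqcup^{\overline{\mathfrak S}}C$. You never invoke $\star$ and instead manufacture the supremum as $\bigsqcap^{\mathfrak S}R$ for $R$ the set of real upper bounds; this is correct but roundabout, and the self-duality of $\overline{\mathfrak S}\setminus\{\bot\}$ under $\star$ is precisely what lets the paper bypass that detour.

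On your ``main obstacle'' --- showing $\bigsqcup^{\overline{\mathfrak S}}C\sqsubseteq_{\mathfrak S}\sigma$ when $\sigma$ may be hidden, equivalently $\bigsqcup^{\mathfrak S}C=\bigsqcup^{\overline{\mathfrak S}}C$ --- you are right that this is where the content lies, and your proffered justification (``hidden states arise as non-real infima rather than as chain-limits overshooting $\mathfrak S$-bounds'') is a slogan rather than an argument. The paper is equally terse at this step, simply saying that chain-completeness of the set ``derives from'' chain-completeness of $\overline{\mathfrak S}$ without spelling out why the supremum remains below $\sigma$. So you have not introduced a new gap, but you have not closed this one either; what you are missing relative to the paper is the star trick for the existence part, not a fuller treatment of the obstacle you correctly flag.
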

\begin{proof}
The only non trivial part is the chain-completeness of $\{\,\sigma'\in \overline{{ \mathfrak{S}}}\;\vert\; \sigma'\sqsubseteq_{{}_{{{ \mathfrak{S}}}}}\sigma\,\}$. It derives from the chain-completeness of $\overline{{ \mathfrak{S}}}$ which can be derived from the down-completeness of $\overline{{ \mathfrak{S}}}$ using the star operation.
\end{proof}

\begin{definition}
We will denote by ${\widehat{{\mathcal{P}}(\overline{{ \mathfrak{S}}})}}$ the subset of ${\mathcal{P}}(\overline{{ \mathfrak{S}}})$ formed by the collections $J\in {\mathcal{P}}(\overline{{ \mathfrak{S}}})$ which admit a common upper-bound in ${ \mathfrak{S}}$.  Due to the down-completeness of ${ \mathfrak{S}}$, the elements of such a collection admit a supremum in ${ \mathfrak{S}}$.
\end{definition}

\begin{lemma}\label{PastlowersubsetP}
${\widehat{{\mathcal{P}}(\overline{{ \mathfrak{S}}})}} $ is a lower subset (i.e.  it is downward closed) of $({\mathcal{P}}(\overline{{ \mathfrak{S}}}),\sqsubseteq)$ containing every elements of the form $\{\sigma\}$ for $\sigma\in \overline{{ \mathfrak{S}}}$.
\end{lemma}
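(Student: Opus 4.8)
The plan is to verify the two assertions directly from the definitions; both are elementary consequences of transitivity of the order together with the fact that $\overline{ \mathfrak{S}}$ sits inside ${ \mathfrak{S}}$.

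For the statement that every singleton $\{\sigma\}$ with $\sigma\in\overline{ \mathfrak{S}}$ lies in ${\widehat{{\mathcal{P}}(\overline{{ \mathfrak{S}}})}}$: since $\overline{ \mathfrak{S}}\subseteq { \mathfrak{S}}$, the element $\sigma$ itself is a common upper-bound in ${ \mathfrak{S}}$ of the one-element family $\{\sigma\}$, so $\{\sigma\}\in{\widehat{{\mathcal{P}}(\overline{{ \mathfrak{S}}})}}$ by the very definition of ${\widehat{{\mathcal{P}}(\overline{{ \mathfrak{S}}})}}$. Nothing further is needed here.

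For downward closedness, I would take $F'\in{\widehat{{\mathcal{P}}(\overline{{ \mathfrak{S}}})}}$ and $F\in{\mathcal{P}}(\overline{{ \mathfrak{S}}})$ with $F\sqsubseteq F'$, and exhibit a common upper-bound of $F$ in ${ \mathfrak{S}}$. By hypothesis there is $\tau\in{ \mathfrak{S}}$ with $f'\sqsubseteq_{{}_{{ \mathfrak{S}}}}\tau$ for all $f'\in F'$. The single observation required is that the partial order $\sqsubseteq_{{}_{\overline{ \mathfrak{S}}}}$ on $\overline{ \mathfrak{S}}$ is the restriction of $\sqsubseteq_{{}_{{ \mathfrak{S}}}}$; this holds because, in Definition \ref{definitionstarstructure}, $\overline{ \mathfrak{S}}$ is required to be closed under arbitrary ${ \mathfrak{S}}$-infima and hence inherits its order (and its meets) from ${ \mathfrak{S}}$. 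Now fix $f\in F$: the relation $F\sqsubseteq F'$ provides $f'\in F'$ with $f\sqsubseteq_{{}_{\overline{ \mathfrak{S}}}}f'$, hence $f\sqsubseteq_{{}_{{ \mathfrak{S}}}}f'\sqsubseteq_{{}_{{ \mathfrak{S}}}}\tau$, so $f\sqsubseteq_{{}_{{ \mathfrak{S}}}}\tau$ by transitivity of $\sqsubseteq_{{}_{{ \mathfrak{S}}}}$. Since $f\in F$ was arbitrary, $\tau$ is a common upper-bound of $F$ in ${ \mathfrak{S}}$, whence $F\in{\widehat{{\mathcal{P}}(\overline{{ \mathfrak{S}}})}}$.

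There is no real obstacle in this argument: the only point that deserves an explicit sentence is the compatibility of the orders $\sqsubseteq_{{}_{\overline{ \mathfrak{S}}}}$ and $\sqsubseteq_{{}_{{ \mathfrak{S}}}}$, after which the conclusion is pure transitivity. In particular, no use is made of the star involution, of chain-completeness, or of the generation of $\overline{ \mathfrak{S}}$ by its maximal elements; the lemma is essentially a bookkeeping step preparing the completion construction that follows.
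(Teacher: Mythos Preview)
Your proof is correct and is exactly the direct verification the paper has in mind; the paper's own proof consists of the single word ``Trivial.'' You have simply spelled out the details, and your remark that only transitivity and the compatibility of the two orders are used is accurate.
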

\begin{proof}
Trivial.
\end{proof}

\begin{definition}\label{definitionTheta}
We have already introduced the following map
\begin{eqnarray}
\begin{array}{lclcl}
\Theta^{\overline{{ \mathfrak{S}}}} & : & {{ \mathfrak{S}}} & \longrightarrow & {\widehat{{\mathcal{P}}(\overline{{ \mathfrak{S}}})}}\\
& & \sigma & \mapsto & \Omega(\{\,\sigma'\in \overline{{ \mathfrak{S}}}\;\vert\; \sigma'\sqsubseteq_{{}_{{{ \mathfrak{S}}}}}\sigma\,\}).\label{defThetaS}
\end{array}
\end{eqnarray}
\end{definition}

\begin{definition}\label{definitionLambda}
We also introduce the following map 
\begin{eqnarray}
\begin{array}{lclcl}
\Lambda^{\overline{{ \mathfrak{S}}}} & : & {\widehat{{\mathcal{P}}(\overline{{ \mathfrak{S}}})}} & \longrightarrow & { \mathfrak{S}}\\
& & J & \mapsto &\Lambda^{\overline{{ \mathfrak{S}}}}(J):=\bigsqcup{}^{{}^{{ \mathfrak{S}}}}J.
\end{array}
\end{eqnarray}
\end{definition}

\begin{lemma}\label{lemmaLambda}
\begin{eqnarray}
\forall \sigma\in {{ \mathfrak{S}}},\exists J:=\Theta^{\overline{{ \mathfrak{S}}}}(\sigma)\in {\widehat{{\mathcal{P}}(\overline{{ \mathfrak{S}}})}} & \vert & \Lambda^{\overline{{ \mathfrak{S}}}}(J)=\sigma.
\end{eqnarray}
\end{lemma}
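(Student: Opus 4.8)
The plan is to take $J:=\Theta^{\overline{\mathfrak{S}}}(\sigma)$ and to show that $\tau:=\Lambda^{\overline{\mathfrak{S}}}(J)=\bigsqcup^{\mathfrak{S}}\Theta^{\overline{\mathfrak{S}}}(\sigma)$ coincides with $\sigma$. That $J$ is a legitimate input for $\Lambda^{\overline{\mathfrak{S}}}$ is clear, since $\sigma$ is a common upper bound in $\mathfrak{S}$ of $\Theta^{\overline{\mathfrak{S}}}(\sigma)=Max(\{\sigma'\in\overline{\mathfrak{S}}\mid\sigma'\sqsubseteq_{\mathfrak{S}}\sigma\})$; the same observation gives $\tau\sqsubseteq_{\mathfrak{S}}\sigma$ immediately. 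The whole content is therefore the reverse inequality, which I would obtain through the separation property (\ref{realChuseparated}): it suffices to prove $\epsilon^{\mathfrak{S}}_{\mathfrak{l}}(\tau)=\epsilon^{\mathfrak{S}}_{\mathfrak{l}}(\sigma)$ for every real effect $\mathfrak{l}\in\overline{\mathfrak{E}}_{\mathfrak{S}}$.

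First I would record that $\epsilon^{\mathfrak{S}}_{\mathfrak{l}}$ is monotone in the state argument: from $\tau\sqcap_{\mathfrak{S}}\sigma=\tau$ and (\ref{axiomsigmainfsemilattice}) one gets $\epsilon^{\mathfrak{S}}_{\mathfrak{l}}(\tau)=\epsilon^{\mathfrak{S}}_{\mathfrak{l}}(\tau)\wedge\epsilon^{\mathfrak{S}}_{\mathfrak{l}}(\sigma)$, i.e. $\epsilon^{\mathfrak{S}}_{\mathfrak{l}}(\tau)\leq\epsilon^{\mathfrak{S}}_{\mathfrak{l}}(\sigma)$ in $\mathfrak{B}$. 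This already settles the case $\epsilon^{\mathfrak{S}}_{\mathfrak{l}}(\sigma)=\bot$, and, using $\overline{\mathfrak{l}}\in\overline{\mathfrak{E}}_{\mathfrak{S}}$ (by (\ref{axiomreduc3})) together with (\ref{etbar}), the case $\epsilon^{\mathfrak{S}}_{\mathfrak{l}}(\sigma)=\textit{\bf N}$ reduces to the case $\epsilon^{\mathfrak{S}}_{\overline{\mathfrak{l}}}(\sigma)=\textit{\bf Y}$. So everything comes down to: if $\epsilon^{\mathfrak{S}}_{\mathfrak{l}}(\sigma)=\textit{\bf Y}$ then $\epsilon^{\mathfrak{S}}_{\mathfrak{l}}(\tau)=\textit{\bf Y}$. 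Reading off the definition (\ref{defepsilonS}) of $\epsilon^{\mathfrak{S}}$ and the shape (\ref{reducedspaceofeffects}) of real effects, $\epsilon^{\mathfrak{S}}_{\mathfrak{l}}(\sigma)=\textit{\bf Y}$ forces $\mathfrak{l}$ to be of the form $\mathfrak{l}_{(\alpha,\centerdot)}$ or $\mathfrak{l}_{(\alpha,\beta)}$ for some $\alpha\in\overline{\mathfrak{S}}$ with $\alpha\sqsubseteq_{\mathfrak{S}}\sigma$; it then suffices to prove $\alpha\sqsubseteq_{\mathfrak{S}}\tau$.

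The key step is thus the claim that every $\alpha\in\overline{\mathfrak{S}}$ with $\alpha\sqsubseteq_{\mathfrak{S}}\sigma$ lies below some element of $\Theta^{\overline{\mathfrak{S}}}(\sigma)$ in the order of $\overline{\mathfrak{S}}$. Here I would use that $D:=\{\sigma'\in\overline{\mathfrak{S}}\mid\sigma'\sqsubseteq_{\mathfrak{S}}\sigma\}$ is a chain-complete lower subset of $\overline{\mathfrak{S}}$ (the lemma established just before Definition \ref{definitionTheta}, which rests on the chain-completeness of $\overline{\mathfrak{S}}$). The subset $\{\gamma\in D\mid\gamma\sqsupseteq_{\overline{\mathfrak{S}}}\alpha\}$ is nonempty, and the supremum of any of its chains exists in $\overline{\mathfrak{S}}$, still belongs to $D$ by chain-completeness of $D$, and still dominates $\alpha$; hence by Zorn's lemma it has a maximal element $m$, which is then maximal in $D$ as well, i.e. $m\in\Omega(D)=\Theta^{\overline{\mathfrak{S}}}(\sigma)$. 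Consequently $\alpha\sqsubseteq_{\overline{\mathfrak{S}}}m$, so $\alpha\sqsubseteq_{\mathfrak{S}}m\sqsubseteq_{\mathfrak{S}}\bigsqcup^{\mathfrak{S}}\Theta^{\overline{\mathfrak{S}}}(\sigma)=\tau$, and therefore $\epsilon^{\mathfrak{S}}_{\mathfrak{l}}(\tau)=\textit{\bf Y}$ by (\ref{defepsilonS}). Collecting the cases, $\epsilon^{\mathfrak{S}}_{\mathfrak{l}}(\tau)=\epsilon^{\mathfrak{S}}_{\mathfrak{l}}(\sigma)$ for all $\mathfrak{l}\in\overline{\mathfrak{E}}_{\mathfrak{S}}$, and (\ref{realChuseparated}) yields $\tau=\sigma$.

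I expect the Zorn argument to be the only genuinely delicate point, in particular the verification that a chain inside $\{\gamma\in D\mid\gamma\sqsupseteq_{\overline{\mathfrak{S}}}\alpha\}$ admits an upper bound there; this hinges on reading ``chain-complete'' correctly and on the fact that a supremum computed in $\overline{\mathfrak{S}}$ is automatically an upper bound for the inherited order of $\mathfrak{S}$. Everything else is a routine unfolding of the definitions of $\Theta^{\overline{\mathfrak{S}}}$, $\Lambda^{\overline{\mathfrak{S}}}$ and $\epsilon^{\mathfrak{S}}$.
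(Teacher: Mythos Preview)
Your proposal is correct and follows essentially the same approach as the paper: define $\tau$ as the supremum in $\mathfrak{S}$ of the real states below $\sigma$, and use the separation property (\ref{realChuseparated}) over $\overline{\mathfrak{E}}_{\mathfrak{S}}$ to conclude $\tau=\sigma$. The paper's proof is extremely terse (it simply asserts $\epsilon^{\mathfrak{S}}_{\mathfrak{l}}(\sigma)=\epsilon^{\mathfrak{S}}_{\mathfrak{l}}(\sigma')$ without justification and uses the full set $D=\{\alpha\in\overline{\mathfrak{S}}\mid\alpha\sqsubseteq\sigma\}$ rather than $Max(D)=\Theta^{\overline{\mathfrak{S}}}(\sigma)$), whereas you carefully spell out the case analysis on $\epsilon^{\mathfrak{S}}_{\mathfrak{l}}(\sigma)$ and the Zorn step identifying an element of $\Theta^{\overline{\mathfrak{S}}}(\sigma)$ above a given $\alpha\in D$; your version is in fact a clean justification of the claim the paper leaves implicit.
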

\begin{proof}
Let us consider $\sigma\in {{ \mathfrak{S}}}$ and let us introduce $\sigma':=\bigsqcup{}^{{}^{{ \mathfrak{S}}}}\{\,\alpha\in \overline{{ \mathfrak{S}}}\;\vert\; \alpha\sqsubseteq_{{}_{\overline{{ \mathfrak{S}}}}}\sigma\,\}$.  We note that for any ${ \mathfrak{l}}\in \overline{ \mathfrak{E}}_{ \mathfrak{S}}$, we have $\epsilon^{{ \mathfrak{S}}}_{ \mathfrak{l}}(\sigma)=\epsilon^{{ \mathfrak{S}}}_{ \mathfrak{l}}(\sigma')$ and then, due to the condition (\ref{realChuseparated}), we then conclude that $\sigma=\sigma'$.
\end{proof}

\begin{lemma}\label{galoisthetaSbar}
The pair $(\Theta^{\overline{{ \mathfrak{S}}}},\Lambda^{\overline{{ \mathfrak{S}}}})$ is a Galois connection, i.e.
\begin{eqnarray}
\forall J\in {\widehat{{\mathcal{P}}(\overline{{ \mathfrak{S}}})}},\forall \sigma\in {{ \mathfrak{S}}},&& J \sqsubseteq \Theta^{\overline{{ \mathfrak{S}}}}(\sigma) \;\; \Leftrightarrow \;\; \Lambda^{\overline{{ \mathfrak{S}}}}(J)\sqsubseteq_{{}_{{ \mathfrak{S}}}} \sigma.
\end{eqnarray}
\end{lemma}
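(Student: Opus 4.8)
The plan is to unfold the three ingredients of the statement and then prove the two implications of the biconditional separately: the forward implication is a one-line computation, while the converse requires a Zorn-type argument producing a maximal real state above a prescribed one.

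First I would record the relevant facts. By Definition~\ref{definitionTheta}, $\Theta^{\overline{{ \mathfrak{S}}}}(\sigma)=Max((\downarrow_{{}_{{ \mathfrak{S}}}}\sigma)\cap\overline{{ \mathfrak{S}}})$, so in particular $\Theta^{\overline{{ \mathfrak{S}}}}(\sigma)\subseteq\downarrow_{{}_{{ \mathfrak{S}}}}\sigma$; by Definition~\ref{definitionLambda}, $\Lambda^{\overline{{ \mathfrak{S}}}}(J)=\bigsqcup{}^{{}^{{ \mathfrak{S}}}}J$, which exists because $J\in\widehat{{\mathcal{P}}(\overline{{ \mathfrak{S}}})}$ admits a common upper bound and ${ \mathfrak{S}}$ is down-complete; and by~(\ref{DefinitionpreorderPSbar}), $F\sqsubseteq F'$ means every element of $F$ lies below some element of $F'$ for $\sqsubseteq_{{}_{\overline{{ \mathfrak{S}}}}}$, which coincides with the restriction of $\sqsubseteq_{{}_{{ \mathfrak{S}}}}$ since $\overline{{ \mathfrak{S}}}$ is closed under $\bigsqcap{}^{{}^{{ \mathfrak{S}}}}$. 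For the forward implication, assume $J\sqsubseteq\Theta^{\overline{{ \mathfrak{S}}}}(\sigma)$: for each $j\in J$ there is $\omega\in\Theta^{\overline{{ \mathfrak{S}}}}(\sigma)$ with $j\sqsubseteq_{{}_{\overline{{ \mathfrak{S}}}}}\omega$, hence $j\sqsubseteq_{{}_{{ \mathfrak{S}}}}\omega\sqsubseteq_{{}_{{ \mathfrak{S}}}}\sigma$; thus $\sigma$ is an upper bound of $J$ in ${ \mathfrak{S}}$ and therefore $\bigsqcup{}^{{}^{{ \mathfrak{S}}}}J\sqsubseteq_{{}_{{ \mathfrak{S}}}}\sigma$, i.e.\ $\Lambda^{\overline{{ \mathfrak{S}}}}(J)\sqsubseteq_{{}_{{ \mathfrak{S}}}}\sigma$.

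For the converse, assume $\Lambda^{\overline{{ \mathfrak{S}}}}(J)\sqsubseteq_{{}_{{ \mathfrak{S}}}}\sigma$ and fix $j\in J$. Then $j\sqsubseteq_{{}_{{ \mathfrak{S}}}}\bigsqcup{}^{{}^{{ \mathfrak{S}}}}J\sqsubseteq_{{}_{{ \mathfrak{S}}}}\sigma$ and $j\in\overline{{ \mathfrak{S}}}$, so $j$ belongs to ${ \mathfrak{F}}:=(\downarrow_{{}_{{ \mathfrak{S}}}}\sigma)\cap\overline{{ \mathfrak{S}}}$, which lies in ${\mathcal{F}}(\overline{{ \mathfrak{S}}})$ by the lemma just above. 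I would then apply Zorn's lemma to the subposet ${ \mathfrak{F}}_j:=\{\sigma'\in{ \mathfrak{F}}\;\vert\;\sigma'\sqsupseteq_{{}_{\overline{{ \mathfrak{S}}}}}j\}$: it contains $j$, and every nonempty $C\subseteq_{Chain}{ \mathfrak{F}}_j$ has upper bound $\bigsqcup{}^{{}^{\overline{{ \mathfrak{S}}}}}C$, which lies in ${ \mathfrak{F}}$ by chain-completeness of ${ \mathfrak{F}}$ and lies above $j$ since $C\neq\varnothing$, hence lies in ${ \mathfrak{F}}_j$. So ${ \mathfrak{F}}_j$ has a maximal element $\omega$. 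Any $\omega'\in{ \mathfrak{F}}$ with $\omega'\sqsupseteq_{{}_{\overline{{ \mathfrak{S}}}}}\omega$ satisfies $\omega'\sqsupseteq_{{}_{\overline{{ \mathfrak{S}}}}}j$, hence $\omega'\in{ \mathfrak{F}}_j$ and maximality forces $\omega'=\omega$; thus $\omega\in Max({ \mathfrak{F}})=\Theta^{\overline{{ \mathfrak{S}}}}(\sigma)$ with $j\sqsubseteq_{{}_{\overline{{ \mathfrak{S}}}}}\omega$. Since $j\in J$ was arbitrary, $J\sqsubseteq\Theta^{\overline{{ \mathfrak{S}}}}(\sigma)$, which completes the equivalence.

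The only delicate point is this last Zorn's lemma step. One has to check that nonempty chains of ${ \mathfrak{F}}_j$ have upper bounds inside ${ \mathfrak{F}}_j$ (which follows from chain-completeness of ${ \mathfrak{F}}$ together with the observation that the supremum of a nonempty chain of elements above $j$ is again above $j$), and, more importantly, that a maximal element of ${ \mathfrak{F}}_j$ remains maximal in the larger poset ${ \mathfrak{F}}$ — this holds because anything in ${ \mathfrak{F}}$ lying above $\omega$ (which is $\sqsupseteq j$) automatically lands back in ${ \mathfrak{F}}_j$. Everything else in the argument is a direct transcription of the definitions of $\Theta^{\overline{{ \mathfrak{S}}}}$, $\Lambda^{\overline{{ \mathfrak{S}}}}$ and the preorder $\sqsubseteq$.
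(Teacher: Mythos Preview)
Your proof is correct. The paper states this lemma without proof, presumably treating it as a routine unpacking of the definitions; your argument fills in precisely the details one would expect, including the Zorn step for the converse direction (finding a maximal element of $(\downarrow_{{}_{{ \mathfrak{S}}}}\sigma)\cap\overline{{ \mathfrak{S}}}$ above a prescribed $j$), which is indeed the only nontrivial ingredient and which you justify carefully via the chain-completeness established in the lemma preceding Definition~\ref{definitionTheta}.
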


\begin{definition}\label{definitionJSbar}
We will denote by $cl^{\overline{{ \mathfrak{S}}}}$ the closure associated to the previous Galois connection, i.e.
\begin{eqnarray}
\begin{array}{lclcl}
cl^{\overline{{ \mathfrak{S}}}} & : & {\widehat{{\mathcal{P}}(\overline{{ \mathfrak{S}}})}} & \longrightarrow & {\widehat{{\mathcal{P}}(\overline{{ \mathfrak{S}}})}}\\
& & J & \mapsto &\Theta^{\overline{{ \mathfrak{S}}}} \circ \Lambda^{\overline{{ \mathfrak{S}}}}(J).
\end{array}\label{expressioncl}
\end{eqnarray}
We will denote by ${ \mathfrak{J}}_{\overline{{ \mathfrak{S}}}}$ the set of closed elements in ${\widehat{{\mathcal{P}}(\overline{{ \mathfrak{S}}})}}$, i.e. ${ \mathfrak{J}}_{\overline{{ \mathfrak{S}}}}:=cl^{\overline{{ \mathfrak{S}}}}({\widehat{{\mathcal{P}}(\overline{{ \mathfrak{S}}})}})$.
\end{definition}

\begin{theorem}
We have as usual
\begin{eqnarray}
\forall J\in {\widehat{{\mathcal{P}}(\overline{{ \mathfrak{S}}})}},&& J\sqsubseteq cl^{{{ \mathfrak{S}}'}}(J),\label{closure1}\\
\forall J,J'\in {\widehat{{\mathcal{P}}(\overline{{ \mathfrak{S}}})}},&& J\sqsubseteq J' \;\;\Rightarrow\;\;cl^{{{ \mathfrak{S}}'}}(J)\sqsubseteq cl^{{{ \mathfrak{S}}'}}(J'),\label{closure2}\\
\forall J\in {\widehat{{\mathcal{P}}(\overline{{ \mathfrak{S}}})}},&& cl^{{{ \mathfrak{S}}'}}\circ cl^{{{ \mathfrak{S}}'}}(J) = cl^{{{ \mathfrak{S}}'}}(J).\label{closure3}
%\\
%\forall \sigma\in { \mathfrak{S}}',&& cl^{{{ \mathfrak{S}}'}}(\{\sigma\})=\{\sigma\},\\
%\forall J\in cl^{{{ \mathfrak{S}}'}}({ \mathcal{P}}({ \mathfrak{S}}')),\forall \sigma,\sigma'\in J,&& \widehat{\sigma\sigma'}{}^{{}^{{{ \mathfrak{S}}'}}}\;\;\Rightarrow\;\; \sigma=\sigma' .
\end{eqnarray}
\end{theorem}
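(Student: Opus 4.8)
The plan is to obtain the three statements as the standard closure-operator axioms attached to the Galois connection of Lemma~\ref{galoisthetaSbar}, sharpened where needed by the retraction identity of Lemma~\ref{lemmaLambda}. Throughout I would abbreviate $\Theta:=\Theta^{\overline{{ \mathfrak{S}}}}$, $\Lambda:=\Lambda^{\overline{{ \mathfrak{S}}}}$ and $cl:=cl^{\overline{{ \mathfrak{S}}}}=\Theta\circ\Lambda$ (the operator written $cl^{{ \mathfrak{S}}'}$ in the statement). For (\ref{closure1}) I would instantiate the equivalence of Lemma~\ref{galoisthetaSbar} at $\sigma:=\Lambda(J)$: since $\sqsubseteq_{{}_{{ \mathfrak{S}}}}$ is reflexive, $\Lambda(J)\sqsubseteq_{{}_{{ \mathfrak{S}}}}\Lambda(J)$ holds, and the implication $\bigl(\Lambda(J)\sqsubseteq_{{}_{{ \mathfrak{S}}}}\sigma\bigr)\Rightarrow\bigl(J\sqsubseteq\Theta(\sigma)\bigr)$ then gives $J\sqsubseteq\Theta(\Lambda(J))=cl(J)$.

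For (\ref{closure2}) I would first record that $\Theta$ is $\sqsubseteq$-monotone: if $\sigma\sqsubseteq_{{}_{{ \mathfrak{S}}}}\sigma'$, then using $\Lambda(\Theta(\sigma))=\sigma$ from Lemma~\ref{lemmaLambda} we get $\Lambda(\Theta(\sigma))\sqsubseteq_{{}_{{ \mathfrak{S}}}}\sigma'$, hence $\Theta(\sigma)\sqsubseteq\Theta(\sigma')$ by the equivalence of Lemma~\ref{galoisthetaSbar} applied with $J=\Theta(\sigma)$. Then, given $J\sqsubseteq J'$, I would compose with (\ref{closure1}) for $J'$: by transitivity of $\sqsubseteq$, $J\sqsubseteq J'\sqsubseteq\Theta(\Lambda(J'))$, so $J\sqsubseteq\Theta(\Lambda(J'))$, and Lemma~\ref{galoisthetaSbar} forces $\Lambda(J)\sqsubseteq_{{}_{{ \mathfrak{S}}}}\Lambda(J')$; applying the monotonicity of $\Theta$ just obtained yields $cl(J)=\Theta(\Lambda(J))\sqsubseteq\Theta(\Lambda(J'))=cl(J')$.

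For (\ref{closure3}) I would write $cl(cl(J))=\Theta\bigl(\Lambda(\Theta(\Lambda(J)))\bigr)$ and then collapse the inner composite $\Lambda\circ\Theta$ using Lemma~\ref{lemmaLambda}, which states exactly that $\Lambda(\Theta(\mu))=\mu$ for every $\mu\in{ \mathfrak{S}}$; taking $\mu=\Lambda(J)\in{ \mathfrak{S}}$ leaves $cl(cl(J))=\Theta(\Lambda(J))=cl(J)$. The one genuinely delicate point — and the step I expect to matter — is precisely this last equality: on ${\widehat{{\mathcal{P}}(\overline{{ \mathfrak{S}}})}}$ the relation $\sqsubseteq$ is only a preorder, so the Galois connection by itself would yield merely that $cl(cl(J))$ and $cl(J)$ mutually $\sqsubseteq$-dominate, not that they coincide; one must bring in the sharper identity $\Lambda\circ\Theta=\mathrm{id}_{{ \mathfrak{S}}}$ of Lemma~\ref{lemmaLambda} to upgrade this to a true set equality. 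The remaining verifications (reflexivity and transitivity of the preorders involved) are routine.
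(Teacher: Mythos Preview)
Your proof is correct and is precisely the standard Galois-connection argument the paper invokes with the phrase ``as usual'' (the paper gives no proof of this theorem). Your observation that the genuine equality in (\ref{closure3}) requires the retraction identity $\Lambda\circ\Theta=\mathrm{id}_{{ \mathfrak{S}}}$ of Lemma~\ref{lemmaLambda}, rather than mere mutual $\sqsubseteq$-domination on the preorder, is a nice point that the paper itself leaves implicit.
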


\begin{lemma}\label{JSbarfirstcondition}
\begin{eqnarray}
\forall J\in { \mathfrak{J}}_{\overline{{ \mathfrak{S}}}},\forall \sigma,\sigma'\in J,&& \sigma\not=\sigma' \;\;\Rightarrow\;\; \neg \widehat{\sigma\sigma'}{}^{{}^{\overline{{ \mathfrak{S}}}}}.\label{defJSbar2}
\end{eqnarray}
\end{lemma}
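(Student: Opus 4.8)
The plan is to reduce the statement to two ingredients and then close the argument in a few lines.

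First, I would record the shape of an arbitrary element of ${ \mathfrak{J}}_{\overline{{ \mathfrak{S}}}}$. By Lemma~\ref{lemmaLambda} we have $\Lambda^{\overline{{ \mathfrak{S}}}}\circ\Theta^{\overline{{ \mathfrak{S}}}}=\mathrm{id}_{{ \mathfrak{S}}}$, so the closure $cl^{\overline{{ \mathfrak{S}}}}=\Theta^{\overline{{ \mathfrak{S}}}}\circ\Lambda^{\overline{{ \mathfrak{S}}}}$ has image exactly $\Theta^{\overline{{ \mathfrak{S}}}}({ \mathfrak{S}})$, and every $J\in{ \mathfrak{J}}_{\overline{{ \mathfrak{S}}}}$ can be written, with $\zeta:=\Lambda^{\overline{{ \mathfrak{S}}}}(J)=\bigsqcup{}^{{}^{{ \mathfrak{S}}}}J$, as $J=cl^{\overline{{ \mathfrak{S}}}}(J)=\Theta^{\overline{{ \mathfrak{S}}}}(\zeta)=Max\big((\downarrow_{{}_{{ \mathfrak{S}}}}\zeta)\cap\overline{{ \mathfrak{S}}}\big)$. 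In particular the "closedness" of $J$ is used only through the fact that $J$ is the set of maximal real states below some $\zeta\in{ \mathfrak{S}}$.

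Second — and this is the point that needs care — $\overline{{ \mathfrak{S}}}$ is closed under those joins that exist in ${ \mathfrak{S}}$: if $\alpha,\beta\in\overline{{ \mathfrak{S}}}$ and $\widehat{\alpha\beta}{}^{{}^{{ \mathfrak{S}}}}$, then $\alpha\sqcup_{{}_{{ \mathfrak{S}}}}\beta\in\overline{{ \mathfrak{S}}}$. Indeed ${ \mathfrak{l}}_{{}_{(\alpha,\centerdot)}}$ and ${ \mathfrak{l}}_{{}_{(\beta,\centerdot)}}$ belong to $\overline{{ \mathfrak{E}}}_{ \mathfrak{S}}$, and by formula (\ref{defcapES}) their infimum in ${ \mathfrak{E}}_{ \mathfrak{S}}$ is ${ \mathfrak{l}}_{{}_{(\alpha\sqcup_{{}_{{ \mathfrak{S}}}}\beta,\centerdot)}}$ whenever $\widehat{\alpha\beta}{}^{{}^{{ \mathfrak{S}}}}$; since $\overline{{ \mathfrak{E}}}_{ \mathfrak{S}}$ is assumed to be a sub Inf semi-lattice of ${ \mathfrak{E}}_{ \mathfrak{S}}$ (Definition~\ref{definitionstarstructure}), this infimum lies in the set (\ref{reducedspaceofeffects}), which forces $\alpha\sqcup_{{}_{{ \mathfrak{S}}}}\beta\in\overline{{ \mathfrak{S}}}$. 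I would state this as a small preliminary lemma, because the whole statement genuinely fails without it: one can build a space of states carrying a hidden state that sits strictly below the join of two incomparable real states, and its image under $\Theta^{\overline{{ \mathfrak{S}}}}$ then violates (\ref{defJSbar2}).

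Granting these two facts, the conclusion is immediate. Fix $J\in{ \mathfrak{J}}_{\overline{{ \mathfrak{S}}}}$, write $J=Max\big((\downarrow_{{}_{{ \mathfrak{S}}}}\zeta)\cap\overline{{ \mathfrak{S}}}\big)$ as above, and take $\sigma,\sigma'\in J$ with $\sigma\neq\sigma'$. Suppose, for a contradiction, $\widehat{\sigma\sigma'}{}^{{}^{\overline{{ \mathfrak{S}}}}}$; a fortiori $\widehat{\sigma\sigma'}{}^{{}^{{ \mathfrak{S}}}}$, so $\upsilon:=\sigma\sqcup_{{}_{{ \mathfrak{S}}}}\sigma'$ exists and $\upsilon\in\overline{{ \mathfrak{S}}}$ by the second fact. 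Since $\sigma,\sigma'\in J\subseteq(\downarrow_{{}_{{ \mathfrak{S}}}}\zeta)\cap\overline{{ \mathfrak{S}}}$ we have $\sigma,\sigma'\sqsubseteq_{{}_{{ \mathfrak{S}}}}\zeta$, hence $\upsilon\sqsubseteq_{{}_{{ \mathfrak{S}}}}\zeta$; thus $\upsilon\in(\downarrow_{{}_{{ \mathfrak{S}}}}\zeta)\cap\overline{{ \mathfrak{S}}}$ and $\upsilon\sqsupseteq_{{}_{{ \mathfrak{S}}}}\sigma$. Maximality of $\sigma$ in $(\downarrow_{{}_{{ \mathfrak{S}}}}\zeta)\cap\overline{{ \mathfrak{S}}}$ forces $\upsilon=\sigma$, i.e. $\sigma\sqsupseteq_{{}_{{ \mathfrak{S}}}}\sigma'$; then maximality of $\sigma'$ forces $\sigma=\sigma'$, contradicting $\sigma\neq\sigma'$. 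Hence $\neg\widehat{\sigma\sigma'}{}^{{}^{\overline{{ \mathfrak{S}}}}}$, as claimed; note the same computation in fact yields the stronger $\neg\widehat{\sigma\sigma'}{}^{{}^{{ \mathfrak{S}}}}$. The main obstacle is thus entirely concentrated in the second fact about $\overline{{ \mathfrak{E}}}_{ \mathfrak{S}}$; everything else is bookkeeping with $\Theta^{\overline{{ \mathfrak{S}}}}$, $\Lambda^{\overline{{ \mathfrak{S}}}}$ and the definition of a maximal element.
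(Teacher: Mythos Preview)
Your argument has a genuine gap, and your own closing remark actually diagnoses it. You claim that the computation ``in fact yields the stronger $\neg\widehat{\sigma\sigma'}{}^{{}^{{ \mathfrak{S}}}}$''. But $\sigma,\sigma'\in\Theta^{\overline{{ \mathfrak{S}}}}(\zeta)$ always satisfy $\sigma,\sigma'\sqsubseteq_{{}_{ \mathfrak{S}}}\zeta$, so $\widehat{\sigma\sigma'}{}^{{}^{{ \mathfrak{S}}}}$ holds automatically. If your argument really proved the stronger statement, every $J\in{ \mathfrak{J}}_{\overline{{ \mathfrak{S}}}}$ would be a singleton and there would be no hidden states at all --- directly contradicting the constructions of Sections~\ref{subsectioncompleterealspaceofstates} and~\ref{sectionmultidimensional}.

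The error lies in your ``second fact''. You assert that $\alpha,\beta\in\overline{{ \mathfrak{S}}}$ and $\widehat{\alpha\beta}{}^{{}^{{ \mathfrak{S}}}}$ imply $\alpha\sqcup_{{}_{{ \mathfrak{S}}}}\beta\in\overline{{ \mathfrak{S}}}$, justifying this by reading ``$\overline{{ \mathfrak{E}}}_{ \mathfrak{S}}$ is a sub Inf semi-lattice of ${ \mathfrak{E}}_{ \mathfrak{S}}$'' (Definition~\ref{definitionstarstructure}) as ``closed under $\sqcap_{{}_{{ \mathfrak{E}}_{ \mathfrak{S}}}}$''. That reading is not the one the paper uses: property~(\ref{axiomreduc2}) is stated with $\bigsqcap{}^{{}^{\overline{{ \mathfrak{E}}}_{ \mathfrak{S}}}}$, not $\bigsqcap{}^{{}^{{ \mathfrak{E}}_{ \mathfrak{S}}}}$, and in the reconstruction of Section~\ref{subsectionreconstruction} the set~(\ref{reducedspaceofeffectsprime}) is manifestly \emph{not} closed under the ambient infimum of ${ \mathfrak{E}}_{{ \mathfrak{J}}_{{ \mathfrak{S}}'}}$ whenever hidden states exist. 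Concretely, in the ontic completion ${ \mathfrak{J}}^c_{{ \mathfrak{Z}}'_2}$ two distinct pure states $\sigma_1,\sigma_2$ with $\sigma_2\neq\sigma_1^\star$ have a common upper bound in ${ \mathfrak{S}}$ (the hidden state $\{\sigma_1,\sigma_2\}$) but none in $\overline{{ \mathfrak{S}}}$, and $\sigma_1\sqcup_{{}_{ \mathfrak{S}}}\sigma_2\notin\overline{{ \mathfrak{S}}}$.

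The paper's proof uses only the weaker hypothesis $\widehat{\sigma\sigma'}{}^{{}^{\overline{{ \mathfrak{S}}}}}$ --- the existence of a common upper bound \emph{in $\overline{{ \mathfrak{S}}}$} --- and from that draws $(\sigma\sqcup_{{}_{ \mathfrak{S}}}\sigma')\in\overline{{ \mathfrak{S}}}$. The rest of your argument (writing $J=\Theta^{\overline{{ \mathfrak{S}}}}(\zeta)$, placing $\upsilon$ below $\zeta$, and invoking maximality) is the same as the paper's and is fine; what needs repairing is the bridge from the \emph{correct} hypothesis $\widehat{\sigma\sigma'}{}^{{}^{\overline{{ \mathfrak{S}}}}}$ to $(\sigma\sqcup_{{}_{ \mathfrak{S}}}\sigma')\in\overline{{ \mathfrak{S}}}$, not from the automatically-true $\widehat{\sigma\sigma'}{}^{{}^{{ \mathfrak{S}}}}$.
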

\begin{proof}
Let us consider $ \sigma,\sigma'\in \Theta^{\overline{{ \mathfrak{S}}}}(\alpha)$ with $\sigma\not=\sigma'$. Let us suppose that $\widehat{\sigma\sigma'}{}^{{}^{\overline{{ \mathfrak{S}}}}}$. We have then $(\sigma\sqcup_{{}_{{{ \mathfrak{S}}}}}\sigma')\in \overline{{ \mathfrak{S}}}$ and also $(\sigma\sqcup_{{}_{{{ \mathfrak{S}}}}}\sigma')\sqsubseteq_{{}_{{{ \mathfrak{S}}}}}\alpha$, and we have obviously $\sigma\sqsubset_{{}_{{{ \mathfrak{S}}}}} (\sigma\sqcup_{{}_{{{ \mathfrak{S}}}}}\sigma')$. These points lead to a contradiction as we have assumed that $ \sigma$ is a  {maximal element} of $\{\,\alpha'\in \overline{{ \mathfrak{S}}}\;\vert\; \alpha'\sqsubseteq_{{}_{{{ \mathfrak{S}}}}}\alpha\,\}$.
\end{proof}

\begin{lemma}\label{JSbarsecondcondition}
\begin{eqnarray}
\forall J\in { \mathfrak{J}}_{\overline{{ \mathfrak{S}}}},\forall \sigma,\sigma'\in J,&& \sigma'\not\sqsupseteq_{{}_{\overline{{ \mathfrak{S}}}}}\sigma^\star.\label{defJSbar3}
\end{eqnarray}
Note that this condition is equivalent to the following one
\begin{eqnarray}
\forall J\in { \mathfrak{J}}_{\overline{{ \mathfrak{S}}}},\nexists \omega\in \overline{{ \mathfrak{S}}} & \vert & \{\omega,\omega^\star\}\sqsubseteq J.\label{defJSbar3b}
\end{eqnarray}
\end{lemma}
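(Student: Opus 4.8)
The plan is to exploit the fact that every closed element $J\in{ \mathfrak{J}}_{\overline{{ \mathfrak{S}}}}$ arises as $J=\Theta^{\overline{{ \mathfrak{S}}}}(\zeta)$ for some $\zeta\in{ \mathfrak{S}}$. Indeed, since ${ \mathfrak{J}}_{\overline{{ \mathfrak{S}}}}=cl^{\overline{{ \mathfrak{S}}}}({\widehat{{\mathcal{P}}(\overline{{ \mathfrak{S}}})}})$ with $cl^{\overline{{ \mathfrak{S}}}}=\Theta^{\overline{{ \mathfrak{S}}}}\circ\Lambda^{\overline{{ \mathfrak{S}}}}$ by (\ref{expressioncl}), any $J\in{ \mathfrak{J}}_{\overline{{ \mathfrak{S}}}}$ equals $\Theta^{\overline{{ \mathfrak{S}}}}(\zeta)$ with $\zeta:=\Lambda^{\overline{{ \mathfrak{S}}}}(J)=\bigsqcup{}^{{}^{{ \mathfrak{S}}}}J\in{ \mathfrak{S}}$ (the supremum existing because $J\in{\widehat{{\mathcal{P}}(\overline{{ \mathfrak{S}}})}}$ and ${ \mathfrak{S}}$ is down-complete). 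By the definition (\ref{defThetaS}) of $\Theta^{\overline{{ \mathfrak{S}}}}$, every element of $J$ then belongs to $(\downarrow_{{}_{{ \mathfrak{S}}}}\zeta)\cap\overline{{ \mathfrak{S}}}$; in particular each element of $J$ is $\sqsubseteq_{{}_{{ \mathfrak{S}}}}\zeta$.

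To prove (\ref{defJSbar3}), fix $J\in{ \mathfrak{J}}_{\overline{{ \mathfrak{S}}}}$ and $\sigma,\sigma'\in J$ with $\sigma\neq\bot_{{}_{{ \mathfrak{S}}}}$ (so that $\sigma^\star$ is defined; if $\sigma=\bot_{{}_{{ \mathfrak{S}}}}$ then Lemma \ref{JSbarfirstcondition} forces $J=\{\bot_{{}_{{ \mathfrak{S}}}}\}$ and the assertion is vacuous). Suppose, towards a contradiction, that $\sigma^\star\sqsubseteq_{{}_{\overline{{ \mathfrak{S}}}}}\sigma'$. By the previous paragraph $\sigma\sqsubseteq_{{}_{{ \mathfrak{S}}}}\zeta$ and $\sigma^\star\sqsubseteq_{{}_{{ \mathfrak{S}}}}\sigma'\sqsubseteq_{{}_{{ \mathfrak{S}}}}\zeta$, so $\zeta$ is a common upper bound of $\sigma$ and $\sigma^\star$ in ${ \mathfrak{S}}$, i.e.\ $\widehat{\sigma\sigma^\star}{}^{{}^{{ \mathfrak{S}}}}$, contradicting the real-structure axiom (\ref{starcomplement}). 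This establishes (\ref{defJSbar3}).

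It remains to prove the equivalence of (\ref{defJSbar3}) with (\ref{defJSbar3b}); this is a pure unfolding of the preorder $\sqsubseteq$ on ${\mathcal{P}}(\overline{{ \mathfrak{S}}})$ from (\ref{DefinitionpreorderPSbar}) (which is \emph{not} inclusion). For $(\ref{defJSbar3})\Rightarrow(\ref{defJSbar3b})$: if some $\omega\in\overline{{ \mathfrak{S}}}$ satisfied $\{\omega,\omega^\star\}\sqsubseteq J$, then $\omega\neq\bot_{{}_{{ \mathfrak{S}}}}$ (as $\omega^\star$ occurs) and there would be $\sigma,\sigma'\in J$ with $\omega\sqsubseteq_{{}_{\overline{{ \mathfrak{S}}}}}\sigma$ and $\omega^\star\sqsubseteq_{{}_{\overline{{ \mathfrak{S}}}}}\sigma'$; from $\bot_{{}_{{ \mathfrak{S}}}}\sqsubset_{{}_{{ \mathfrak{S}}}}\omega\sqsubseteq_{{}_{{ \mathfrak{S}}}}\sigma$ we get $\sigma\neq\bot_{{}_{{ \mathfrak{S}}}}$, and the order-reversing property (\ref{orderreversing}) yields $\sigma^\star\sqsubseteq_{{}_{\overline{{ \mathfrak{S}}}}}\omega^\star\sqsubseteq_{{}_{\overline{{ \mathfrak{S}}}}}\sigma'$, contradicting (\ref{defJSbar3}). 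For $(\ref{defJSbar3b})\Rightarrow(\ref{defJSbar3})$: if $\sigma,\sigma'\in J$ satisfied $\sigma^\star\sqsubseteq_{{}_{\overline{{ \mathfrak{S}}}}}\sigma'$ with $\sigma\neq\bot_{{}_{{ \mathfrak{S}}}}$, then putting $\omega:=\sigma$ the memberships $\sigma\in J$ and $\sigma'\in J$ (the latter together with $\sigma^\star\sqsubseteq_{{}_{\overline{{ \mathfrak{S}}}}}\sigma'$) show $\{\omega,\omega^\star\}\sqsubseteq J$, contradicting (\ref{defJSbar3b}).

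There is no substantial obstacle here: the whole argument is a definition-chase built on axiom (\ref{starcomplement}) together with the description of closed elements from the first paragraph. The only points requiring a little care are (i) tracking where $\star$ is defined — handled by observing that anything above an element strictly greater than $\bot_{{}_{{ \mathfrak{S}}}}$ is itself $\neq\bot_{{}_{{ \mathfrak{S}}}}$, so $\sigma^\star$ is legitimate whenever it is written — and (ii) that $\{\omega,\omega^\star\}\sqsubseteq J$ only asserts the existence in $J$ of elements dominating $\omega$ and $\omega^\star$, not that $\omega,\omega^\star\in J$ themselves.
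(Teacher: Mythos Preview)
Your proof is correct and follows essentially the same argument as the paper: write $J=\Theta^{\overline{{ \mathfrak{S}}}}(\zeta)$, observe that any $\sigma,\sigma'\in J$ lie below $\zeta$, and derive a contradiction with the axiom $\neg\,\widehat{\sigma\sigma^\star}{}^{{}^{{ \mathfrak{S}}}}$. Your treatment is more explicit than the paper's about the edge case $\sigma=\bot_{{}_{{ \mathfrak{S}}}}$ and about the mechanics of the equivalence with (\ref{defJSbar3b}), but the substance is identical.
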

\begin{proof}
Let us consider $\sigma,\sigma'\in \Theta^{\overline{{ \mathfrak{S}}}}(\alpha)$. 
If we had $\sigma'\sqsupseteq_{{}_{{{ \mathfrak{S}}}}}\sigma^\star$, $\alpha$ would be a common upper bound of $\sigma$ and $\sigma^\star$ in ${{ \mathfrak{S}}}$. This point would contradict the fact that $\neg\widehat{\sigma\sigma^\star}{}^{{}^{{{ \mathfrak{S}}}}}$ by the definition of the $\star$ map. \\
The analogous formulation (\ref{defJSbar3b}) is derived from (\ref{defJSbar3}) using the order-reversing property of the star. 
\end{proof}

\begin{lemma}\label{lemmaposetJSbar}
The pre-order $\sqsubseteq$ defined on ${ \mathcal{P}}(\overline{{ \mathfrak{S}}})$ is in fact a partial order on ${ \mathfrak{J}}_{\overline{{ \mathfrak{S}}}}$.
\end{lemma}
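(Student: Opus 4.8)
The plan is to deduce antisymmetry of $\sqsubseteq$ on ${ \mathfrak{J}}_{\overline{{ \mathfrak{S}}}}$ from antisymmetry of $\sqsubseteq_{{}_{{ \mathfrak{S}}}}$ on ${ \mathfrak{S}}$, using the Galois connection $(\Theta^{\overline{{ \mathfrak{S}}}},\Lambda^{\overline{{ \mathfrak{S}}}})$ of Lemma \ref{galoisthetaSbar}. First, reflexivity and transitivity of $\sqsubseteq$ restricted to ${ \mathfrak{J}}_{\overline{{ \mathfrak{S}}}}$ are inherited from the fact that $\sqsubseteq$ is already a pre-order on all of ${\mathcal{P}}(\overline{{ \mathfrak{S}}})$ (formula (\ref{DefinitionpreorderPSbar})), so only antisymmetry remains to be shown.

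The key preliminary observation is that every $J\in{ \mathfrak{J}}_{\overline{{ \mathfrak{S}}}}$ has the form $J=\Theta^{\overline{{ \mathfrak{S}}}}(\sigma)$ with $\sigma=\Lambda^{\overline{{ \mathfrak{S}}}}(J)$. Indeed, by idempotency of the closure (\ref{closure3}), a closed element satisfies $\Theta^{\overline{{ \mathfrak{S}}}}(\Lambda^{\overline{{ \mathfrak{S}}}}(J))=cl^{\overline{{ \mathfrak{S}}}}(J)=J$; conversely $\Lambda^{\overline{{ \mathfrak{S}}}}(\Theta^{\overline{{ \mathfrak{S}}}}(\sigma))=\sigma$ for every $\sigma\in{ \mathfrak{S}}$ by Lemma \ref{lemmaLambda}. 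Hence $\Theta^{\overline{{ \mathfrak{S}}}}$ and $\Lambda^{\overline{{ \mathfrak{S}}}}$ restrict to mutually inverse bijections between ${ \mathfrak{S}}$ and ${ \mathfrak{J}}_{\overline{{ \mathfrak{S}}}}$, and in particular ${ \mathfrak{J}}_{\overline{{ \mathfrak{S}}}}=\{\,\Theta^{\overline{{ \mathfrak{S}}}}(\sigma)\;\vert\;\sigma\in{ \mathfrak{S}}\,\}$.

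Now suppose $J,J'\in{ \mathfrak{J}}_{\overline{{ \mathfrak{S}}}}$ satisfy $J\sqsubseteq J'$ and $J'\sqsubseteq J$, and set $\sigma:=\Lambda^{\overline{{ \mathfrak{S}}}}(J)$, $\sigma':=\Lambda^{\overline{{ \mathfrak{S}}}}(J')$, so that $J=\Theta^{\overline{{ \mathfrak{S}}}}(\sigma)$ and $J'=\Theta^{\overline{{ \mathfrak{S}}}}(\sigma')$. From $J\sqsubseteq\Theta^{\overline{{ \mathfrak{S}}}}(\sigma')$ and the Galois property of Lemma \ref{galoisthetaSbar} we get $\sigma=\Lambda^{\overline{{ \mathfrak{S}}}}(J)\sqsubseteq_{{}_{{ \mathfrak{S}}}}\sigma'$; symmetrically, from $J'\sqsubseteq\Theta^{\overline{{ \mathfrak{S}}}}(\sigma)$ we get $\sigma'\sqsubseteq_{{}_{{ \mathfrak{S}}}}\sigma$. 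Since $({ \mathfrak{S}},\sqsubseteq_{{}_{{ \mathfrak{S}}}})$ is a genuine poset, $\sigma=\sigma'$, and therefore $J=\Theta^{\overline{{ \mathfrak{S}}}}(\sigma)=\Theta^{\overline{{ \mathfrak{S}}}}(\sigma')=J'$. This establishes antisymmetry, completing the argument.

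There is no real obstacle here; the only point requiring a moment's care is the representation ${ \mathfrak{J}}_{\overline{{ \mathfrak{S}}}}=\Theta^{\overline{{ \mathfrak{S}}}}({ \mathfrak{S}})$ together with the fact that $\Lambda^{\overline{{ \mathfrak{S}}}}$ inverts $\Theta^{\overline{{ \mathfrak{S}}}}$ on it, which is exactly Lemma \ref{lemmaLambda} combined with the idempotency (\ref{closure3}). One could alternatively argue abstractly, purely from the closure axioms (\ref{closure1})--(\ref{closure3}) and monotonicity, that the image of the closure operator attached to a Galois connection with a poset is itself a poset under the induced pre-order, but the direct computation above is shorter and self-contained.
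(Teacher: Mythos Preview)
Your proof is correct, but it follows a genuinely different route from the paper's. The paper argues directly at the level of elements: given $J\sqsubseteq J'$ and $J'\sqsubseteq J$, for each $\sigma\in J$ one finds $\sigma'\in J'$ with $\sigma\sqsubseteq_{\overline{\mathfrak{S}}}\sigma'$ and then $\sigma''\in J$ with $\sigma'\sqsubseteq_{\overline{\mathfrak{S}}}\sigma''$; the antichain property (\ref{defJSbar2}) of Lemma~\ref{JSbarfirstcondition} (distinct elements of a closed set have no common upper bound in $\overline{\mathfrak{S}}$) forces $\sigma=\sigma''$, hence $\sigma=\sigma'\in J'$, so $J\subseteq J'$ and symmetrically $J=J'$. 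Your argument instead transports the question to ${\mathfrak{S}}$ via the Galois connection of Lemma~\ref{galoisthetaSbar}, using Lemma~\ref{lemmaLambda} and idempotency (\ref{closure3}) to identify ${\mathfrak{J}}_{\overline{\mathfrak{S}}}$ with $\Theta^{\overline{\mathfrak{S}}}({\mathfrak{S}})$ and then invoking antisymmetry in the poset ${\mathfrak{S}}$. Your approach is the abstract order-theoretic one (closed elements of a Galois connection with a poset inherit antisymmetry) and never touches the specific antichain structure of the sets in ${\mathfrak{J}}_{\overline{\mathfrak{S}}}$; the paper's approach is more hands-on and makes explicit use of that structure, which is independently needed elsewhere in the section.
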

\begin{proof}
Let us consider $J$ and $J'$ two elements of ${ \mathfrak{J}}_{\overline{{ \mathfrak{S}}}}$ and let us assume that $J\sqsubseteq J'$ and $J'\sqsubseteq J$. Due to $J\sqsubseteq J'$, we deduce that for any $ \sigma\in J$ there exists $\sigma'\in J'$ such that $\sigma \sqsubseteq_{{}_{\overline{{ \mathfrak{S}}}}} \sigma'$. For this $\sigma'$, the property $J'\sqsubseteq J$ implies that there exists $\sigma''\in J$ such that $\sigma' \sqsubseteq_{{}_{\overline{{ \mathfrak{S}}}}} \sigma''$. However, due to the property (\ref{defJSbar2}) we necessarily have $\sigma=\sigma''$, and then $\sigma=\sigma'$ by antisymmetry of $\sqsupseteq_{{}_{\overline{{ \mathfrak{S}}}}}$. This concludes the proof.
\end{proof}

\begin{lemma}\label{infimumJ}
The infimum of two elements $J$ and $J'$ of ${ \mathfrak{J}}_{\overline{{ \mathfrak{S}}}}$, denoted $J\sqcap_{{}_{{ \mathfrak{J}}_{\overline{{ \mathfrak{S}}}}}} J'$ is given as follows
\begin{eqnarray}
J\sqcap_{{}_{{ \mathfrak{J}}_{\overline{{ \mathfrak{S}}}}}} J' &=& Max \left( (\downarrow_{{}_{ \mathfrak{S}}}\bigsqcup{}^{{}^{{{ \mathfrak{S}}}}}J) \cap (\downarrow_{{}_{ \mathfrak{S}}}\bigsqcup{}^{{}^{{{ \mathfrak{S}}}}}J') \cap {\overline{{ \mathfrak{S}}}}
\right)\\
&=& J\sqcap J'\\
%&=& Max \{\,\bigsqcup{}^{{}^{{\overline{ \mathfrak{S}}}}}K\;\vert\; K\subseteq\{\, \sigma\sqcap_{{}_{{\overline{ \mathfrak{S}}}}}\sigma'\;\vert\; \sigma\in J,\sigma'\in J'\,\},\;\; \widehat{\;K\;}{}^{{}^{\overline{ \mathfrak{S}}}} \,\}.
&=& Max \{\, \sigma\sqcap_{{}_{{\overline{ \mathfrak{S}}}}}\sigma'\;\vert\; \sigma\in J,\sigma'\in J'\,\}.\label{infimumJprop}
\end{eqnarray}
\end{lemma}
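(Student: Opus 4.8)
The plan is to identify ${ \mathfrak{J}}_{\overline{{ \mathfrak{S}}}}$ with an order-isomorphic copy of ${ \mathfrak{S}}$ through the pair $(\Theta^{\overline{{ \mathfrak{S}}}},\Lambda^{\overline{{ \mathfrak{S}}}})$, transport the (already available) infimum of ${ \mathfrak{S}}$ across that isomorphism, and then unfold the definition of $\Theta^{\overline{{ \mathfrak{S}}}}$ to recover the explicit formulas.

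First I would verify that $\Theta^{\overline{{ \mathfrak{S}}}}$ and $\Lambda^{\overline{{ \mathfrak{S}}}}$ restrict to mutually inverse monotone bijections $\Theta^{\overline{{ \mathfrak{S}}}}\colon{ \mathfrak{S}}\to{ \mathfrak{J}}_{\overline{{ \mathfrak{S}}}}$ and $\Lambda^{\overline{{ \mathfrak{S}}}}\colon{ \mathfrak{J}}_{\overline{{ \mathfrak{S}}}}\to{ \mathfrak{S}}$, hence form an order isomorphism. Indeed, monotonicity of both maps is part of the Galois connection of Lemma~\ref{galoisthetaSbar}; Lemma~\ref{lemmaLambda} gives $\Lambda^{\overline{{ \mathfrak{S}}}}\circ\Theta^{\overline{{ \mathfrak{S}}}}=\mathrm{id}_{{ \mathfrak{S}}}$; and by Definition~\ref{definitionJSbar} the composite $\Theta^{\overline{{ \mathfrak{S}}}}\circ\Lambda^{\overline{{ \mathfrak{S}}}}$ is the closure $cl^{\overline{{ \mathfrak{S}}}}$, which fixes precisely the elements of ${ \mathfrak{J}}_{\overline{{ \mathfrak{S}}}}$; in particular the image of $\Theta^{\overline{{ \mathfrak{S}}}}$ is exactly ${ \mathfrak{J}}_{\overline{{ \mathfrak{S}}}}$. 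Since ${ \mathfrak{S}}$ is a down-complete Inf semi-lattice, so is ${ \mathfrak{J}}_{\overline{{ \mathfrak{S}}}}$, and for $J,J'\in{ \mathfrak{J}}_{\overline{{ \mathfrak{S}}}}$ one has
\[ J\sqcap_{{}_{{ \mathfrak{J}}_{\overline{{ \mathfrak{S}}}}}}J'=\Theta^{\overline{{ \mathfrak{S}}}}\bigl(\Lambda^{\overline{{ \mathfrak{S}}}}(J)\sqcap_{{}_{{ \mathfrak{S}}}}\Lambda^{\overline{{ \mathfrak{S}}}}(J')\bigr)=\Theta^{\overline{{ \mathfrak{S}}}}\bigl((\bigsqcup{}^{{}^{{ \mathfrak{S}}}}J)\sqcap_{{}_{{ \mathfrak{S}}}}(\bigsqcup{}^{{}^{{ \mathfrak{S}}}}J')\bigr). \]
Unfolding $\Theta^{\overline{{ \mathfrak{S}}}}(\zeta)=Max\bigl((\downarrow_{{}_{{ \mathfrak{S}}}}\zeta)\cap\overline{{ \mathfrak{S}}}\bigr)$ for $\zeta=(\bigsqcup{}^{{}^{{ \mathfrak{S}}}}J)\sqcap_{{}_{{ \mathfrak{S}}}}(\bigsqcup{}^{{}^{{ \mathfrak{S}}}}J')$ and using $\downarrow_{{}_{{ \mathfrak{S}}}}(a\sqcap_{{}_{{ \mathfrak{S}}}}b)=(\downarrow_{{}_{{ \mathfrak{S}}}}a)\cap(\downarrow_{{}_{{ \mathfrak{S}}}}b)$ yields the first of the three displayed expressions (the middle one being merely a name for that common value).

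For the equality with the third expression, the crux is the following claim: for $\beta\in\overline{{ \mathfrak{S}}}$ one has $\beta\sqsubseteq_{{}_{{ \mathfrak{S}}}}\bigsqcup{}^{{}^{{ \mathfrak{S}}}}J$ if and only if $\beta\sqsubseteq_{{}_{\overline{{ \mathfrak{S}}}}}\sigma$ for some $\sigma\in J$. The "if" direction is immediate. For "only if", since $J$ is closed we have $J=\Theta^{\overline{{ \mathfrak{S}}}}\bigl(\Lambda^{\overline{{ \mathfrak{S}}}}(J)\bigr)=Max\bigl(\{\sigma'\in\overline{{ \mathfrak{S}}}\mid\sigma'\sqsubseteq_{{}_{{ \mathfrak{S}}}}\bigsqcup{}^{{}^{{ \mathfrak{S}}}}J\}\bigr)$, and the set $\{\sigma'\in\overline{{ \mathfrak{S}}}\mid\sigma'\sqsubseteq_{{}_{{ \mathfrak{S}}}}\bigsqcup{}^{{}^{{ \mathfrak{S}}}}J\}$ is a chain-complete lower subset of $\overline{{ \mathfrak{S}}}$ (it lies in ${ \mathcal{F}}(\overline{{ \mathfrak{S}}})$, by the lemma stated just before Definition~\ref{definitionTheta}); so by Zorn's lemma $\beta$ lies below one of its maximal elements, i.e.\ below an element of $J$. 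Applying this claim to $J$ and to $J'$, together with the fact that $\overline{{ \mathfrak{S}}}$ is stable under infima computed in ${ \mathfrak{S}}$ (hence $\sigma\sqcap_{{}_{\overline{{ \mathfrak{S}}}}}\sigma'=\sigma\sqcap_{{}_{{ \mathfrak{S}}}}\sigma'$ for $\sigma,\sigma'\in\overline{{ \mathfrak{S}}}$), shows that an element $\beta\in\overline{{ \mathfrak{S}}}$ satisfies $\beta\sqsubseteq_{{}_{\overline{{ \mathfrak{S}}}}}\sigma\sqcap_{{}_{\overline{{ \mathfrak{S}}}}}\sigma'$ for some $\sigma\in J$, $\sigma'\in J'$ exactly when $\beta\in(\downarrow_{{}_{{ \mathfrak{S}}}}\bigsqcup{}^{{}^{{ \mathfrak{S}}}}J)\cap(\downarrow_{{}_{{ \mathfrak{S}}}}\bigsqcup{}^{{}^{{ \mathfrak{S}}}}J')\cap\overline{{ \mathfrak{S}}}$. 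Thus $\{\sigma\sqcap_{{}_{\overline{{ \mathfrak{S}}}}}\sigma'\mid\sigma\in J,\sigma'\in J'\}$ and the (already downward closed) set $(\downarrow_{{}_{{ \mathfrak{S}}}}\bigsqcup{}^{{}^{{ \mathfrak{S}}}}J)\cap(\downarrow_{{}_{{ \mathfrak{S}}}}\bigsqcup{}^{{}^{{ \mathfrak{S}}}}J')\cap\overline{{ \mathfrak{S}}}$ generate the same lower subset of $\overline{{ \mathfrak{S}}}$, so they have the same maximal elements, which identifies the third expression with the first.

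The step I expect to be the main obstacle — or at any rate the one requiring genuine care — is this last claim: the passage from $\beta\sqsubseteq_{{}_{{ \mathfrak{S}}}}\bigsqcup{}^{{}^{{ \mathfrak{S}}}}J$ to "$\beta$ lies below an actual element of $J$" is exactly where closedness of $J$ and the chain-completeness/Zorn argument are indispensable, and where the real-structure hypotheses (down-completeness of ${ \mathfrak{S}}$, stability of $\overline{{ \mathfrak{S}}}$ under ${ \mathfrak{S}}$-infima, and the separation condition~\eqref{realChuseparated} underlying Lemma~\ref{lemmaLambda}) come into play. Once that is secured, the order-isomorphism part and the down-set manipulations are routine.
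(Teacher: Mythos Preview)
Your argument is correct, but it follows a genuinely different route from the paper. The paper argues directly at the level of the pre-order $\sqsubseteq$ on ${\mathcal{P}}(\overline{{ \mathfrak{S}}})$: it observes that the candidate $Max\{\sigma\sqcap_{{}_{\overline{{ \mathfrak{S}}}}}\sigma'\mid\sigma\in J,\sigma'\in J'\}$ lies in $\widehat{{\mathcal{P}}(\overline{{ \mathfrak{S}}})}$ (via Lemma~\ref{PastlowersubsetP}) and then checks the universal property by hand --- every common lower bound $J''\in{ \mathfrak{J}}_{\overline{{ \mathfrak{S}}}}$ of $J$ and $J'$ is shown to satisfy $J''\sqsubseteq Max\{\sigma\sqcap_{{}_{\overline{{ \mathfrak{S}}}}}\sigma'\}$ by picking, for each $\sigma''\in J''$, witnesses $\sigma\in J$ and $\sigma'\in J'$ above it. Your approach instead front-loads the structural content: you first establish that $(\Theta^{\overline{{ \mathfrak{S}}}},\Lambda^{\overline{{ \mathfrak{S}}}})$ restricts to an order isomorphism ${ \mathfrak{S}}\cong{ \mathfrak{J}}_{\overline{{ \mathfrak{S}}}}$ (a fact the paper records only \emph{after} the present lemma, as a separate theorem), then transport the ${ \mathfrak{S}}$-infimum across and unfold. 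What your route buys is that the first displayed formula and the closedness of the infimum fall out for free from the isomorphism, and the equality with the third formula is reduced to the clean claim ``$\beta\in\overline{{ \mathfrak{S}}}$, $\beta\sqsubseteq_{{}_{{ \mathfrak{S}}}}\bigsqcup^{{ \mathfrak{S}}}J$ $\Rightarrow$ $\beta$ lies below some element of $J$'', which you justify properly via chain-completeness and Zorn. The paper's route is shorter on the page but leaves more to the reader (notably the first equality and the fact that the candidate is closed); your route is more self-contained at the cost of anticipating the isomorphism theorem.
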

\begin{proof}
Let us consider $J$ and $J'$ two elements of ${ \mathfrak{J}}_{\overline{{ \mathfrak{S}}}}$ (i.e. $cl^{\overline{{ \mathfrak{S}}}}(J)=J$ and $cl^{\overline{{ \mathfrak{S}}}}(J')=J'$). We first note that $(J\sqcap J') \sqsubseteq J$ and $J$ is in ${\widehat{{\mathcal{P}}(\overline{{ \mathfrak{S}}})}}$, and then, using Lemma \ref{PastlowersubsetP} we deduce that $(J\sqcap J')$ is in ${\widehat{{\mathcal{P}}(\overline{{ \mathfrak{S}}})}}$.\\
Now, let us consider $J''$ in ${ \mathfrak{J}}_{\overline{{ \mathfrak{S}}}}$ such that $J''\sqsubseteq_{{}_{{ \mathfrak{J}}_{\overline{{ \mathfrak{S}}}}}}J$ and $J''\sqsubseteq_{{}_{{ \mathfrak{J}}_{\overline{{ \mathfrak{S}}}}}}J'$. We have that, for any $\sigma''\in J''$, there exists $\sigma\in J$,$\sigma'\in J'$ such that $\sigma''\sqsubseteq_{{}_{{\overline{{ \mathfrak{S}}}}}}\sigma$ and $\sigma''\sqsubseteq_{{}_{{\overline{{ \mathfrak{S}}}}}}\sigma'$, and then such that $\sigma''\sqsubseteq_{{}_{{\overline{{ \mathfrak{S}}}}}}(\sigma\sqcap_{{}_{{\overline{{ \mathfrak{S}}}}}}\sigma')$. As a result, we have $J''\sqsubseteq Max\{\, \sigma\sqcap_{{}_{{\overline{ \mathfrak{S}}}}}\sigma'\;\vert\; \sigma\in J,\sigma'\in J'\,\}$. 
\end{proof}

\begin{lemma}\label{supremumJ}
Let $I$ and $I'$ be two elements of ${ \mathfrak{J}}_{\overline{{ \mathfrak{S}}}}$ such that there exists $I''\in { \mathfrak{J}}_{\overline{{ \mathfrak{S}}}}$ with $I\sqsubseteq I''$ and $I'\sqsubseteq I''$.\\
The supremum of $I$ and $I'$ in ${{ \mathfrak{J}}_{\overline{{ \mathfrak{S}}}}}$ (denoted $I\sqcup_{{}_{{ \mathfrak{J}}_{\overline{{ \mathfrak{S}}}}}} I'$) exists and is simply given by
\begin{eqnarray}
I\sqcup_{{}_{{ \mathfrak{J}}_{\overline{{ \mathfrak{S}}}}}} I' &=& cl^{\overline{{ \mathfrak{S}}}}(I \cup I').
\end{eqnarray}
\end{lemma}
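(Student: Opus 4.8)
The statement is the standard fact that the closed elements of a closure operator form a lattice whose joins are obtained by closing the join computed in the ambient poset. The only subtlety is that $\widehat{{\mathcal{P}}(\overline{ \mathfrak{S}})}$ is merely a \emph{lower} subset of $({\mathcal{P}}(\overline{ \mathfrak{S}}),\sqsubseteq)$, not all of ${\mathcal{P}}(\overline{ \mathfrak{S}})$, and the boundedness hypothesis on $I$ and $I'$ is precisely what guarantees that $I\cup I'$ does not leave $\widehat{{\mathcal{P}}(\overline{ \mathfrak{S}})}$.

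First I would check that $cl^{\overline{ \mathfrak{S}}}(I\cup I')$ is well defined. Since $I\sqsubseteq I''$ and $I'\sqsubseteq I''$, every element of $I\cup I'$ lies below some element of $I''$, i.e. $I\cup I'\sqsubseteq I''$. As $\widehat{{\mathcal{P}}(\overline{ \mathfrak{S}})}$ is downward closed for $\sqsubseteq$ (Lemma \ref{PastlowersubsetP}) and $I''\in\widehat{{\mathcal{P}}(\overline{ \mathfrak{S}})}$, we conclude $I\cup I'\in\widehat{{\mathcal{P}}(\overline{ \mathfrak{S}})}$, so $cl^{\overline{ \mathfrak{S}}}(I\cup I')$ makes sense. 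By idempotency of the closure (property \ref{closure3}) it is a closed element, hence lies in ${ \mathfrak{J}}_{\overline{ \mathfrak{S}}}$.

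Next I would verify that $cl^{\overline{ \mathfrak{S}}}(I\cup I')$ is an upper bound of $I$ and $I'$ in ${ \mathfrak{J}}_{\overline{ \mathfrak{S}}}$. Trivially $I\sqsubseteq I\cup I'$ and $I'\sqsubseteq I\cup I'$; composing with $I\cup I'\sqsubseteq cl^{\overline{ \mathfrak{S}}}(I\cup I')$ (property \ref{closure1}) and using transitivity of the pre-order $\sqsubseteq$ gives $I\sqsubseteq cl^{\overline{ \mathfrak{S}}}(I\cup I')$ and $I'\sqsubseteq cl^{\overline{ \mathfrak{S}}}(I\cup I')$. Since all three collections belong to ${ \mathfrak{J}}_{\overline{ \mathfrak{S}}}$, on which $\sqsubseteq$ restricts to a genuine partial order (Lemma \ref{lemmaposetJSbar}), these are bona fide inequalities in ${ \mathfrak{J}}_{\overline{ \mathfrak{S}}}$.

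Finally I would show minimality. Let $K\in{ \mathfrak{J}}_{\overline{ \mathfrak{S}}}$ be any upper bound of $I$ and $I'$, so $I\sqsubseteq K$ and $I'\sqsubseteq K$, whence every element of $I\cup I'$ lies below an element of $K$, that is $I\cup I'\sqsubseteq K$. Applying the monotone closure $cl^{\overline{ \mathfrak{S}}}$ (property \ref{closure2}) and using $cl^{\overline{ \mathfrak{S}}}(K)=K$ (because $K$ is closed) yields $cl^{\overline{ \mathfrak{S}}}(I\cup I')\sqsubseteq K$. Thus $cl^{\overline{ \mathfrak{S}}}(I\cup I')$ is the least upper bound of $I$ and $I'$ in ${ \mathfrak{J}}_{\overline{ \mathfrak{S}}}$, which proves simultaneously that the supremum $I\sqcup_{{}_{{ \mathfrak{J}}_{\overline{ \mathfrak{S}}}}}I'$ exists and that it equals $cl^{\overline{ \mathfrak{S}}}(I\cup I')$. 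There is no real obstacle beyond bookkeeping with $\sqsubseteq$; the only non-formal point is the first step, where the boundedness hypothesis is used to keep $I\cup I'$ inside $\widehat{{\mathcal{P}}(\overline{ \mathfrak{S}})}$ so that the closure operator is applicable.
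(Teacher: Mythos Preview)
Your proof is correct and follows essentially the same approach as the paper's: both arguments identify the supremum as the closure of the union, using extensivity, monotonicity, and idempotency of $cl^{\overline{\mathfrak{S}}}$, with the boundedness hypothesis ensuring $I\cup I'\in\widehat{\mathcal{P}}(\overline{\mathfrak{S}})$. Your version is somewhat more explicit—the paper first invokes down-completeness of $\mathfrak{J}_{\overline{\mathfrak{S}}}$ to assert existence abstractly and then identifies the formula, whereas you prove existence and the formula simultaneously by direct verification—but the underlying content is the same.
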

\begin{proof}
Let $I$ and $I'$ be two elements of ${ \mathfrak{J}}_{\overline{{ \mathfrak{S}}}}$.  Because of the down-completeness of ${ \mathfrak{J}}_{\overline{{ \mathfrak{S}}}}$ and of the fact that $I$ and $I'$ admit a common-upper-bound, the supremum of $I$ and $I'$ do exist.\\
The supremum $I\sqcup_{{}_{{ \mathfrak{J}}_{\overline{{ \mathfrak{S}}}}}} I'$ must be the smallest closed element upper-bounding for $\sqsubseteq$ the elements $I$ and $I'$. This supremum is then given by $cl^{\overline{{ \mathfrak{S}}}}(I \cup I')$.  Note that the expression $cl^{\overline{{ \mathfrak{S}}}}(I \cup I')$ is well defined because of the existence of $I''$.
\end{proof}

\begin{theorem}
The map $\Theta^{\overline{{ \mathfrak{S}}}}$ is an isomorphism from ${ \mathfrak{S}}$ to ${ \mathfrak{J}}_{\overline{{ \mathfrak{S}}}}$. 
\end{theorem}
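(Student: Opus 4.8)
**Plan for proving that $\Theta^{\overline{{ \mathfrak{S}}}}$ is an isomorphism from ${ \mathfrak{S}}$ to ${ \mathfrak{J}}_{\overline{{ \mathfrak{S}}}}$.**

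The plan is to show that $\Theta^{\overline{{ \mathfrak{S}}}}$ and $\Lambda^{\overline{{ \mathfrak{S}}}}$ are mutually inverse order-isomorphisms between ${ \mathfrak{S}}$ and ${ \mathfrak{J}}_{\overline{{ \mathfrak{S}}}}$, exploiting the Galois connection established in Lemma \ref{galoisthetaSbar}. First I would record the standard consequences of a Galois connection: $\Lambda^{\overline{{ \mathfrak{S}}}}\circ\Theta^{\overline{{ \mathfrak{S}}}}$ and $\Theta^{\overline{{ \mathfrak{S}}}}\circ\Lambda^{\overline{{ \mathfrak{S}}}}$ are, respectively, a kernel operator on ${ \mathfrak{S}}$ and the closure operator $cl^{\overline{{ \mathfrak{S}}}}$ on ${\widehat{{\mathcal{P}}(\overline{{ \mathfrak{S}}})}}$, and that the connection restricts to an order-isomorphism between the closed elements on the right (which are exactly ${ \mathfrak{J}}_{\overline{{ \mathfrak{S}}}}$, by Definition \ref{definitionJSbar}) and the closed elements on the left.

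The crux is therefore to identify the "closed" elements on the ${ \mathfrak{S}}$-side with all of ${ \mathfrak{S}}$, i.e. to prove $\Lambda^{\overline{{ \mathfrak{S}}}}\circ\Theta^{\overline{{ \mathfrak{S}}}}=\mathrm{id}_{ \mathfrak{S}}$. This is precisely the content of Lemma \ref{lemmaLambda}: for every $\sigma\in{ \mathfrak{S}}$, setting $J:=\Theta^{\overline{{ \mathfrak{S}}}}(\sigma)$ one has $\Lambda^{\overline{{ \mathfrak{S}}}}(J)=\sigma$, the key input being the separation axiom (\ref{realChuseparated}) for $\overline{ \mathfrak{E}}_{ \mathfrak{S}}$, which forces $\sigma$ to coincide with the supremum of the real states below it, and hence with the supremum of the \emph{maximal} real states below it since that family is cofinal in $\{\alpha\in\overline{{ \mathfrak{S}}}\mid\alpha\sqsubseteq_{{}_{{ \mathfrak{S}}}}\sigma\}$ (chain-completeness of $\overline{{ \mathfrak{S}}}$ plus Zorn, as noted before Definition \ref{definitionTheta}). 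Consequently $\Theta^{\overline{{ \mathfrak{S}}}}$ is injective and $\Lambda^{\overline{{ \mathfrak{S}}}}$ restricted to ${ \mathfrak{J}}_{\overline{{ \mathfrak{S}}}}$ is its two-sided inverse: indeed for $J\in{ \mathfrak{J}}_{\overline{{ \mathfrak{S}}}}$ we have $\Theta^{\overline{{ \mathfrak{S}}}}(\Lambda^{\overline{{ \mathfrak{S}}}}(J))=cl^{\overline{{ \mathfrak{S}}}}(J)=J$.

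It remains to check surjectivity of $\Theta^{\overline{{ \mathfrak{S}}}}$ onto ${ \mathfrak{J}}_{\overline{{ \mathfrak{S}}}}$ — immediate, since any $J\in{ \mathfrak{J}}_{\overline{{ \mathfrak{S}}}}$ equals $\Theta^{\overline{{ \mathfrak{S}}}}(\Lambda^{\overline{{ \mathfrak{S}}}}(J))$ — and that $\Theta^{\overline{{ \mathfrak{S}}}}$ is an order-isomorphism, not merely a bijection. Monotonicity of $\Theta^{\overline{{ \mathfrak{S}}}}$ and of $\Lambda^{\overline{{ \mathfrak{S}}}}$ follows from the Galois connection (both adjoints are monotone), and since they are mutually inverse, $\sigma\sqsubseteq_{{}_{{ \mathfrak{S}}}}\sigma'$ iff $\Theta^{\overline{{ \mathfrak{S}}}}(\sigma)\sqsubseteq\Theta^{\overline{{ \mathfrak{S}}}}(\sigma')$; moreover $\sqsubseteq$ is a genuine partial order on ${ \mathfrak{J}}_{\overline{{ \mathfrak{S}}}}$ by Lemma \ref{lemmaposetJSbar}. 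Finally, the infimum and supremum structure is transported correctly: Lemma \ref{infimumJ} and Lemma \ref{supremumJ} exhibit meets and (existing) joins in ${ \mathfrak{J}}_{\overline{{ \mathfrak{S}}}}$ matching those of ${ \mathfrak{S}}$ under $\Theta^{\overline{{ \mathfrak{S}}}}$, so the bijection is in fact an isomorphism of the full structure (pointed down-complete Inf semi-lattice). I expect the only genuinely delicate point to be the cofinality argument reducing "supremum over all real states below $\sigma$" to "supremum over maximal real states below $\sigma$"; everything else is formal Galois-theoretic bookkeeping already prepared by the preceding lemmas.
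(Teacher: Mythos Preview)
Your proposal is correct and follows essentially the same route as the paper: both use the Galois connection (Lemma \ref{galoisthetaSbar}) together with Lemma \ref{lemmaLambda} to obtain $\Lambda^{\overline{{ \mathfrak{S}}}}\circ\Theta^{\overline{{ \mathfrak{S}}}}=\mathrm{id}_{ \mathfrak{S}}$ for injectivity, the closure identity $\Theta^{\overline{{ \mathfrak{S}}}}\circ\Lambda^{\overline{{ \mathfrak{S}}}}=cl^{\overline{{ \mathfrak{S}}}}$ for surjectivity onto ${ \mathfrak{J}}_{\overline{{ \mathfrak{S}}}}$, and the adjoint property for preservation of infima. Your version is simply more explicit about the order-isomorphism and the lattice structure, but the argument is the same.
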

\begin{proof}
In order to prove the injective property for $\Theta^{\overline{{ \mathfrak{S}}}}$, it suffices to note that, for any $\sigma,\sigma'\in { \mathfrak{S}}$, the property $\Theta^{\overline{{ \mathfrak{S}}}}(\sigma)=\Theta^{\overline{{ \mathfrak{S}}}}(\sigma')$ implies $\Lambda^{\overline{{ \mathfrak{S}}}}\circ \Theta^{\overline{{ \mathfrak{S}}}}(\sigma)=\Lambda^{\overline{{ \mathfrak{S}}}}\circ \Theta^{\overline{{ \mathfrak{S}}}}(\sigma')$, i.e.  $\sigma=\sigma'$ (see Lemma \ref{lemmaLambda}).\\
The surjectivity property for $\Theta^{\overline{{ \mathfrak{S}}}}$ is a consequence of the injectivity of $\Lambda^{\overline{{ \mathfrak{S}}}}$ on ${ \mathfrak{J}}_{\overline{{ \mathfrak{S}}}}$.\\
As a left component of a Galois connection, the map $\Theta^{\overline{{ \mathfrak{S}}}}$ preserves infima. 
\end{proof}

\begin{lemma} 
$\{\,J\in { \mathfrak{J}}_{\overline{{ \mathfrak{S}}}} \;\vert\; Card(J)=1\,\}$ is isomorphic to $\overline{{ \mathfrak{S}}}$. In other words,
\begin{eqnarray}
\{\,J\in { \mathfrak{J}}_{\overline{{ \mathfrak{S}}}} \;\vert\; Card(J)=1\,\}& \cong & \overline{{ \mathfrak{S}}}.
\end{eqnarray}
and for any $\sigma,\sigma'\in {\overline{{ \mathfrak{S}}}}$, we have
\begin{eqnarray}
\{\sigma\}\sqcap_{{}_{{ \mathfrak{J}}_{\overline{{ \mathfrak{S}}}}}} \{\sigma'\} & = & \{\sigma\sqcap_{{}_{{{ \mathfrak{S}}'}}}\sigma'\}.
\end{eqnarray}
\end{lemma}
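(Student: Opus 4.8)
The plan is to reduce the statement to the isomorphism $\Theta^{\overline{\mathfrak{S}}}\colon \mathfrak{S}\to \mathfrak{J}_{\overline{\mathfrak{S}}}$ established above, combined with Lemma~\ref{lemmaLambda} and Lemma~\ref{infimumJ}. First I would compute $\Theta^{\overline{\mathfrak{S}}}$ on a real state: for $\sigma\in \overline{\mathfrak{S}}$ the set $\{\,\sigma'\in \overline{\mathfrak{S}}\;\vert\;\sigma'\sqsubseteq_{\mathfrak{S}}\sigma\,\}$ admits $\sigma$ itself as greatest element, so its set of maximal elements is $\{\sigma\}$; hence $\Theta^{\overline{\mathfrak{S}}}(\sigma)=\{\sigma\}$. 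Since moreover $\Lambda^{\overline{\mathfrak{S}}}(\{\sigma\})=\bigsqcup^{\mathfrak{S}}\{\sigma\}=\sigma$, one gets $cl^{\overline{\mathfrak{S}}}(\{\sigma\})=\Theta^{\overline{\mathfrak{S}}}(\Lambda^{\overline{\mathfrak{S}}}(\{\sigma\}))=\{\sigma\}$, so $\{\sigma\}$ is a closed element, i.e.\ $\{\sigma\}\in \mathfrak{J}_{\overline{\mathfrak{S}}}$ (its membership in $\widehat{\mathcal{P}(\overline{\mathfrak{S}})}$ being guaranteed by Lemma~\ref{PastlowersubsetP}).

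Next I would identify the cardinality-one elements of $\mathfrak{J}_{\overline{\mathfrak{S}}}$. The inclusion $\Theta^{\overline{\mathfrak{S}}}(\overline{\mathfrak{S}})\subseteq\{\,J\in \mathfrak{J}_{\overline{\mathfrak{S}}}\;\vert\;Card(J)=1\,\}$ is the previous step. Conversely, let $J=\{\tau\}\in \mathfrak{J}_{\overline{\mathfrak{S}}}$ (necessarily $\tau\in\overline{\mathfrak{S}}$, since elements of $\mathfrak{J}_{\overline{\mathfrak{S}}}$ are subsets of $\overline{\mathfrak{S}}$). By surjectivity of $\Theta^{\overline{\mathfrak{S}}}$ onto $\mathfrak{J}_{\overline{\mathfrak{S}}}$ there is $\zeta\in\mathfrak{S}$ with $\Theta^{\overline{\mathfrak{S}}}(\zeta)=\{\tau\}$; applying $\Lambda^{\overline{\mathfrak{S}}}$ and Lemma~\ref{lemmaLambda} yields $\zeta=\Lambda^{\overline{\mathfrak{S}}}(\{\tau\})=\tau\in\overline{\mathfrak{S}}$, so $J=\Theta^{\overline{\mathfrak{S}}}(\tau)$ with $\tau$ real. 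Hence $\{\,J\in \mathfrak{J}_{\overline{\mathfrak{S}}}\;\vert\;Card(J)=1\,\}=\Theta^{\overline{\mathfrak{S}}}(\overline{\mathfrak{S}})$. Since $\Theta^{\overline{\mathfrak{S}}}$ is an order isomorphism $\mathfrak{S}\to\mathfrak{J}_{\overline{\mathfrak{S}}}$ and $\overline{\mathfrak{S}}$ is stable under the infima of $\mathfrak{S}$, the restriction $\sigma\mapsto\{\sigma\}$ is an Inf semi-lattice isomorphism $\overline{\mathfrak{S}}\cong\{\,J\in \mathfrak{J}_{\overline{\mathfrak{S}}}\;\vert\;Card(J)=1\,\}$, which is the first assertion.

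Finally, the infimum formula follows by specializing Lemma~\ref{infimumJ} to $J=\{\sigma\}$ and $J'=\{\sigma'\}$: it gives $\{\sigma\}\sqcap_{\mathfrak{J}_{\overline{\mathfrak{S}}}}\{\sigma'\}=Max\{\,\rho\sqcap_{\overline{\mathfrak{S}}}\rho'\;\vert\;\rho\in\{\sigma\},\ \rho'\in\{\sigma'\}\,\}=\{\sigma\sqcap_{\overline{\mathfrak{S}}}\sigma'\}$, and $\sigma\sqcap_{\overline{\mathfrak{S}}}\sigma'=\sigma\sqcap_{\mathfrak{S}}\sigma'$ because $\overline{\mathfrak{S}}$ is closed under the infima of $\mathfrak{S}$. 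I do not expect a genuine obstacle here: the one point needing a moment's attention is the converse inclusion in the identification of the singleton closed elements, and it is handled by surjectivity of $\Theta^{\overline{\mathfrak{S}}}$ together with Lemma~\ref{lemmaLambda}.
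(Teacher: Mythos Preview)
Your proof is correct. For the forward direction you argue exactly as the paper does. For the converse, your route is slightly different from the paper's: you invoke the surjectivity of $\Theta^{\overline{\mathfrak{S}}}$ onto $\mathfrak{J}_{\overline{\mathfrak{S}}}$ together with the identity $\Lambda^{\overline{\mathfrak{S}}}\circ\Theta^{\overline{\mathfrak{S}}}=\mathrm{id}$ from Lemma~\ref{lemmaLambda}, so that any singleton $\{\tau\}\in\mathfrak{J}_{\overline{\mathfrak{S}}}$ forces its preimage $\zeta$ to equal $\tau\in\overline{\mathfrak{S}}$. The paper instead argues by contradiction directly through the separation property~(\ref{realChuseparated}): assuming $\sigma\in\mathfrak{S}\smallsetminus\overline{\mathfrak{S}}$ with $\Theta^{\overline{\mathfrak{S}}}(\sigma)=\{\sigma'\}$, it checks that $\epsilon^{\mathfrak{S}}_{\mathfrak{l}_{(\alpha,\alpha^\star)}}(\sigma)=\epsilon^{\mathfrak{S}}_{\mathfrak{l}_{(\alpha,\alpha^\star)}}(\sigma')$ for all real $\alpha$ and concludes $\sigma=\sigma'$. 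Your argument is cleaner in that it reuses the Galois-connection machinery already established (Lemma~\ref{lemmaLambda} itself encapsulates the separation argument), while the paper's is more self-contained at the cost of re-deriving a special case of that separation. For the infimum formula both proofs reduce to Lemma~\ref{infimumJ}.
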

\begin{proof}
If $\sigma\in \overline{{ \mathfrak{S}}}$ we have obviously $\Theta^{\overline{{ \mathfrak{S}}}}(\sigma)=\{\sigma\}$. 
Conversely, Let us assume that $Card(\Theta^{\overline{{ \mathfrak{S}}}}(\sigma))=1$ and let us suppose that  $\sigma\in {{ \mathfrak{S}}}\smallsetminus \overline{{ \mathfrak{S}}}$. We denote $\sigma'$ the unique element of $\Theta^{\overline{{ \mathfrak{S}}}}(\sigma)$. We then have $\sigma\not=\sigma'$. However, we observe immediately that, for any $\alpha\in \overline{{ \mathfrak{S}}}$, $\epsilon^{{ \mathfrak{S}}}_{{\mathfrak{l}}_{{}_{(\alpha,\alpha^\star)}}}(\sigma)=\epsilon^{{ \mathfrak{S}}}_{{\mathfrak{l}}_{{}_{(\alpha,\alpha^\star)}}}(\sigma')$. Then, using (\ref{realChuseparated}) we deduce $\sigma=\sigma'$ which contradicts the assumption. We then obtain $\sigma\in \overline{{ \mathfrak{S}}}$.\\
The second property is trivial to check.
\end{proof}

\begin{lemma}\label{lemmatricky}
We have necessarily, for any $I$ subset of $\overline{{ \mathfrak{S}}}$
\begin{eqnarray}
I\in { \mathfrak{J}}_{\overline{{ \mathfrak{S}}}}&\Rightarrow & I = Max\{\,\bigsqcup{}^{{}^{{\overline{{ \mathfrak{S}}}}}} J\;\vert\; J\sqsubseteq I\;\textit{\rm and}\;\widehat{\;J\;}{}^{{}^{{\overline{{ \mathfrak{S}}}}}}\}.
\end{eqnarray}
\end{lemma}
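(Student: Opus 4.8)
The plan is to exploit the description of a closed set. Putting $\alpha:=\Lambda^{\overline{\mathfrak{S}}}(I)=\bigsqcup^{\mathfrak{S}}I$, closedness of $I$ (i.e. $\Theta^{\overline{\mathfrak{S}}}\Lambda^{\overline{\mathfrak{S}}}(I)=I$) gives $I=\Theta^{\overline{\mathfrak{S}}}(\alpha)=Max(\downarrow_{\mathfrak{S}}\alpha\cap\overline{\mathfrak{S}})$; I write $T$ for the set $\{\,\bigsqcup^{\overline{\mathfrak{S}}}J\;\vert\; J\subseteq\overline{\mathfrak{S}},\; J\sqsubseteq I,\; \widehat{\;J\;}^{\overline{\mathfrak{S}}}\,\}$ appearing on the right-hand side. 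Everything will be deduced from the single claim $(\ast)$: for every $J\subseteq\overline{\mathfrak{S}}$ with $J\sqsubseteq I$ and $\widehat{\;J\;}^{\overline{\mathfrak{S}}}$ there is $\sigma'\in I$ with $\bigsqcup^{\overline{\mathfrak{S}}}J\sqsubseteq_{\overline{\mathfrak{S}}}\sigma'$.

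The delicate ingredient toward $(\ast)$ is the sub-claim that whenever $J\subseteq\overline{\mathfrak{S}}$ has a common upper bound inside $\overline{\mathfrak{S}}$, the supremum $\bigsqcup^{\mathfrak{S}}J$ computed in the whole state space already lies in $\overline{\mathfrak{S}}$, so that $\bigsqcup^{\mathfrak{S}}J=\bigsqcup^{\overline{\mathfrak{S}}}J$. To prove it, discard the trivial case $J\subseteq\{\bot_{\mathfrak{S}}\}$ and pick an upper bound $\beta\in\overline{\mathfrak{S}}\smallsetminus\{\bot_{\mathfrak{S}}\}$ of $J$; since $\overline{\mathfrak{S}}$ is chain complete (down-completeness together with the star) and generated by its maximal elements, Zorn's lemma yields a pure state $\mu\in\overline{\mathfrak{S}}^{pure}$ with $\mu\sqsupseteq_{\overline{\mathfrak{S}}}\beta$, hence $\eta:=\bigsqcup^{\mathfrak{S}}J\sqsubseteq_{\mathfrak{S}}\mu$. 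Applying the order-preserving bijection $\Theta^{\overline{\mathfrak{S}}}\colon\mathfrak{S}\to\mathfrak{J}_{\overline{\mathfrak{S}}}$ and using $\Theta^{\overline{\mathfrak{S}}}(\mu)=\{\mu\}$ (valid for any real state), every element of the closed set $\Theta^{\overline{\mathfrak{S}}}(\eta)$ is $\sqsubseteq_{\overline{\mathfrak{S}}}\mu$; were $\Theta^{\overline{\mathfrak{S}}}(\eta)$ to contain two distinct elements, $\mu$ would be a common upper bound of them, contradicting Lemma \ref{JSbarfirstcondition}. Thus $Card(\Theta^{\overline{\mathfrak{S}}}(\eta))=1$, i.e. $\eta\in\overline{\mathfrak{S}}$; being a real upper bound of $J$ dominated by every real upper bound of $J$, this $\eta$ is $\bigsqcup^{\overline{\mathfrak{S}}}J$. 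Now $(\ast)$ follows at once: $J\sqsubseteq I$ forces $\rho\sqsubseteq_{\mathfrak{S}}\alpha$ for every $\rho\in J$, hence $\eta=\bigsqcup^{\mathfrak{S}}J\sqsubseteq_{\mathfrak{S}}\alpha$ and $\bigsqcup^{\overline{\mathfrak{S}}}J=\eta\in\downarrow_{\mathfrak{S}}\alpha\cap\overline{\mathfrak{S}}$; the latter poset is chain complete (a chain in it has, by the sub-claim, a supremum in $\overline{\mathfrak{S}}$ that is still $\sqsubseteq_{\mathfrak{S}}\alpha$), so a further application of Zorn's lemma gives $\sigma'\in Max(\downarrow_{\mathfrak{S}}\alpha\cap\overline{\mathfrak{S}})=I$ with $\sigma'\sqsupseteq_{\overline{\mathfrak{S}}}\eta$.

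It then remains to conclude that $Max(T)=I$. For $\sigma\in I$ the singleton $J=\{\sigma\}$ gives $\{\sigma\}\sqsubseteq I$, $\widehat{\;\{\sigma\}\;}^{\overline{\mathfrak{S}}}$ and $\bigsqcup^{\overline{\mathfrak{S}}}\{\sigma\}=\sigma$, so $I\subseteq T$; moreover such a $\sigma$ is maximal in $T$, for if $\tau\in T$ with $\tau\sqsupseteq_{\overline{\mathfrak{S}}}\sigma$ then $(\ast)$ yields $\sigma'\in I$ with $\tau\sqsubseteq_{\overline{\mathfrak{S}}}\sigma'$, whence $\sigma\sqsubseteq_{\overline{\mathfrak{S}}}\sigma'$ with $\sigma,\sigma'\in I$ forces $\sigma=\sigma'$ by Lemma \ref{JSbarfirstcondition} (otherwise $\sigma'$ is a common upper bound of two distinct elements of the closed set $I$) and hence $\tau=\sigma$; this gives $I\subseteq Max(T)$. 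Conversely, if $\tau\in Max(T)$ then $(\ast)$ produces $\sigma'\in I\subseteq T$ with $\tau\sqsubseteq_{\overline{\mathfrak{S}}}\sigma'$, and maximality of $\tau$ forces $\tau=\sigma'\in I$, so $Max(T)\subseteq I$, and therefore $Max(T)=I$. The main obstacle is the sub-claim of the second paragraph — that a subset of $\overline{\mathfrak{S}}$ bounded above inside $\overline{\mathfrak{S}}$ has its $\mathfrak{S}$-supremum already in $\overline{\mathfrak{S}}$ — whose proof essentially rests on the generation of $\overline{\mathfrak{S}}$ by its maximal elements and on the incompatibility of distinct members of a closed set (Lemma \ref{JSbarfirstcondition}).
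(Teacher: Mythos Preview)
Your proof is correct and follows essentially the same strategy as the paper's: both exploit the closure description $I=cl^{\overline{\mathfrak{S}}}(I)=\Theta^{\overline{\mathfrak{S}}}(\Lambda^{\overline{\mathfrak{S}}}(I))$ to compare $I$ with the set $T$. The main difference is one of rigor rather than method. The paper's proof simply asserts $\{\bigsqcup^{\overline{\mathfrak{S}}}J\}=cl^{\overline{\mathfrak{S}}}(J)$ for $J$ bounded above in $\overline{\mathfrak{S}}$, and then uses monotonicity of $cl^{\overline{\mathfrak{S}}}$ to get $\{\bigsqcup^{\overline{\mathfrak{S}}}J\}\sqsubseteq I$; this assertion amounts exactly to your sub-claim that $\bigsqcup^{\mathfrak{S}}J\in\overline{\mathfrak{S}}$ (so that $\bigsqcup^{\mathfrak{S}}J=\bigsqcup^{\overline{\mathfrak{S}}}J$), which you prove carefully via Lemma~\ref{JSbarfirstcondition}. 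Your argument thus fills a step the paper passes over. The only other divergence is cosmetic: the paper finishes by showing $I\sqsubseteq Max(T)$ and $Max(T)\sqsubseteq I$ in the preorder on $\mathcal{P}(\overline{\mathfrak{S}})$ and invoking antisymmetry (both sides being antichains), whereas you establish the set equalities $I\subseteq Max(T)$ and $Max(T)\subseteq I$ directly.
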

\begin{proof} Let us consider $I\in { \mathfrak{J}}_{\overline{{ \mathfrak{S}}}}$.\\
Let $J$ be a subset of ${\overline{{ \mathfrak{S}}}}$ satisfying $J\sqsubseteq I$ and $\widehat{\;J\;}{}^{{}^{{\overline{{ \mathfrak{S}}}}}}$. We have then $J\in {\widehat{{\mathcal{P}}(\overline{{ \mathfrak{S}}})}}$ (we can then apply $cl^{\overline{{ \mathfrak{S}}}}$ on $J$). We have then also $\{\bigsqcup{}^{{}^{{\overline{{ \mathfrak{S}}}}}} J\}=cl^{\overline{{ \mathfrak{S}}}}(J)\sqsubseteq cl^{\overline{{ \mathfrak{S}}}}(I)=I$. Hence, $Max\{\,\bigsqcup{}^{{}^{{\overline{{ \mathfrak{S}}}}}} J\;\vert\; J\sqsubseteq I\;\textit{\rm and}\;\widehat{\;J\;}{}^{{}^{{\overline{{ \mathfrak{S}}}}}}\}\sqsubseteq I$. \\
On another part, for any $\sigma\in I$ we have $\{\sigma\}\sqsubseteq I$ and $\widehat{\;\sigma\;}{}^{{}^{{\overline{{ \mathfrak{S}}}}}}$, and then $\{\sigma\} \sqsubseteq \{\,\bigsqcup{}^{{}^{{\overline{{ \mathfrak{S}}}}}} J\;\vert\; J\sqsubseteq I\;\textit{\rm and}\;\widehat{\;J\;}{}^{{}^{{\overline{{ \mathfrak{S}}}}}}\}$. As a result, we have then $I \sqsubseteq Max\{\,\bigsqcup{}^{{}^{{\overline{{ \mathfrak{S}}}}}} J\;\vert\; J\sqsubseteq I\;\textit{\rm and}\;\widehat{\;J\;}{}^{{}^{{\overline{{ \mathfrak{S}}}}}}\}$. \\ 
As long as we have $\forall \sigma,\sigma'\in Max\{\,\bigsqcup{}^{{}^{{\overline{{ \mathfrak{S}}}}}} J\;\vert\; J\sqsubseteq I\;\textit{\rm and}\;\widehat{\;J\;}{}^{{}^{{\overline{{ \mathfrak{S}}}}}}\}$ the property $\neg \widehat{\;\sigma \sigma'\;}{}^{{}^{{\overline{{ \mathfrak{S}}}}}}$, we can use the antisymmetry of $\sqsubseteq$ and we have then $I = Max\{\,\bigsqcup{}^{{}^{{\overline{{ \mathfrak{S}}}}}} J\;\vert\; J\sqsubseteq I\;\textit{\rm and}\;\widehat{\;J\;}{}^{{}^{{\overline{{ \mathfrak{S}}}}}}\}$.
\end{proof}
Let us then deduce from previous lemma a fundamental formula that will be useful in many parts of our construction.
\begin{lemma}
Let us denote by $U:=\Theta^{\overline{ \mathfrak{S}}}(\xi)\in { \mathfrak{J}}_{\overline{{ \mathfrak{S}}}}\;\;$ for any $\xi\in { \mathfrak{S}}$. We then have
\begin{eqnarray}
\forall \kappa\in {\overline{{ \mathfrak{S}}}}, \exists \varphi \in U & \vert & \forall \chi\in U,\; (\kappa \sqcap_{{}_{{\overline{{ \mathfrak{S}}}}}} \chi) 
\sqsubseteq_{{}_{{\overline{{ \mathfrak{S}}}}}}
(\kappa \sqcap_{{}_{{\overline{{ \mathfrak{S}}}}}} \varphi).\;\;\;\;\;\;\;\;\;\;\;\label{fundamentalformula}
\end{eqnarray}
\end{lemma}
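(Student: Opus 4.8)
The plan is to produce the witness $\varphi$ as an element of $U$ lying above the join of the family $\{\kappa\sqcap_{\overline{\mathfrak{S}}}\chi\mid\chi\in U\}$, so that the whole statement becomes essentially one application of Lemma~\ref{lemmatricky} (more precisely, of the computation in the first paragraph of its proof).

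First I would set $J:=\{\,\kappa\sqcap_{\overline{\mathfrak{S}}}\chi\;\vert\;\chi\in U\,\}\subseteq\overline{\mathfrak{S}}$; each infimum exists because $\overline{\mathfrak{S}}$ is closed under $\mathfrak{S}$-infima, hence down complete. Two facts are immediate: (i) $\kappa$ is a common upper bound of $J$ inside $\overline{\mathfrak{S}}$, so $\widehat{\,J\,}^{\overline{\mathfrak{S}}}$ holds and in particular $J\in\widehat{\mathcal{P}(\overline{\mathfrak{S}})}$, which also guarantees the existence of $\bigsqcup^{\overline{\mathfrak{S}}}J$; (ii) $J\sqsubseteq U$, since $\kappa\sqcap_{\overline{\mathfrak{S}}}\chi\sqsubseteq_{\overline{\mathfrak{S}}}\chi$ for every $\chi\in U$. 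Now, because $U=\Theta^{\overline{\mathfrak{S}}}(\xi)\in\mathfrak{J}_{\overline{\mathfrak{S}}}$ (so $cl^{\overline{\mathfrak{S}}}(U)=U$), the argument used in the proof of Lemma~\ref{lemmatricky} gives $\{\bigsqcup^{\overline{\mathfrak{S}}}J\}=cl^{\overline{\mathfrak{S}}}(J)\sqsubseteq cl^{\overline{\mathfrak{S}}}(U)=U$, using monotonicity of the closure~(\ref{closure2}). Unwinding the definition~(\ref{DefinitionpreorderPSbar}) of $\sqsubseteq$ on $\mathcal{P}(\overline{\mathfrak{S}})$, this produces an element $\varphi\in U$ with $\bigsqcup^{\overline{\mathfrak{S}}}J\sqsubseteq_{\overline{\mathfrak{S}}}\varphi$.

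It then remains to check that this $\varphi$ works. Fix $\chi\in U$. Since $\kappa\sqcap_{\overline{\mathfrak{S}}}\chi$ is a member of $J$, it is below $\bigsqcup^{\overline{\mathfrak{S}}}J$, hence $\kappa\sqcap_{\overline{\mathfrak{S}}}\chi\sqsubseteq_{\overline{\mathfrak{S}}}\varphi$; combined with the trivial inequality $\kappa\sqcap_{\overline{\mathfrak{S}}}\chi\sqsubseteq_{\overline{\mathfrak{S}}}\kappa$ this exhibits $\kappa\sqcap_{\overline{\mathfrak{S}}}\chi$ as a common lower bound of $\kappa$ and $\varphi$, so $\kappa\sqcap_{\overline{\mathfrak{S}}}\chi\sqsubseteq_{\overline{\mathfrak{S}}}\kappa\sqcap_{\overline{\mathfrak{S}}}\varphi$. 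Since $\chi\in U$ was arbitrary, this is exactly~(\ref{fundamentalformula}). The existence of all the infima/suprema involved and the order manipulations are routine given down completeness of $\overline{\mathfrak{S}}$ and the description of $\sqsubseteq$, $\sqcap$, $\sqcup$ on $\mathfrak{J}_{\overline{\mathfrak{S}}}$ recorded in Lemmas~\ref{lemmaposetJSbar}, \ref{infimumJ} and~\ref{supremumJ}.

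The only genuinely non-formal ingredient is the identity $cl^{\overline{\mathfrak{S}}}(J)=\{\bigsqcup^{\overline{\mathfrak{S}}}J\}$ for $J\in\widehat{\mathcal{P}(\overline{\mathfrak{S}})}$ — that a subset of $\overline{\mathfrak{S}}$ with a real upper bound has a single maximal real state above its $\mathfrak{S}$-join — which is precisely what is verified at the start of the proof of Lemma~\ref{lemmatricky}; I would simply cite it rather than reprove it. So the "hard part" is really just conceptual: recognising that the right object to feed into Lemma~\ref{lemmatricky} is the family $\{\kappa\sqcap_{\overline{\mathfrak{S}}}\chi\mid\chi\in U\}$, whose $\overline{\mathfrak{S}}$-join, after closure, is forced to sit below a \emph{single} element of $U$; once that object is in hand, no further computation is needed.
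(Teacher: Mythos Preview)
Your proof is correct and follows essentially the same approach as the paper: both define the family $J=X_U=\{\kappa\sqcap_{\overline{\mathfrak{S}}}\chi\mid\chi\in U\}$, observe that it has $\kappa$ as a common upper bound in $\overline{\mathfrak{S}}$, and use Lemma~\ref{lemmatricky} to place $\bigsqcup^{\overline{\mathfrak{S}}}J$ below some $\varphi\in U$. The only cosmetic difference is that you invoke the closure-monotonicity step $\{\bigsqcup^{\overline{\mathfrak{S}}}J\}=cl^{\overline{\mathfrak{S}}}(J)\sqsubseteq cl^{\overline{\mathfrak{S}}}(U)=U$ from the \emph{proof} of Lemma~\ref{lemmatricky}, whereas the paper cites its \emph{statement} $U=Max\{\bigsqcup^{\overline{\mathfrak{S}}}V\mid V\sqsubseteq U,\ \widehat{V}^{\overline{\mathfrak{S}}}\}$ directly.
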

\begin{proof}
Let us first remark that, if we denote $U:=\Theta^{\overline{ \mathfrak{S}}}(\xi)$, we have in Lemma \ref{lemmatricky} that $Max \{\, (\bigsqcup{}^{{}^{{\overline{{ \mathfrak{S}}}}}} V)\;\vert\; V\sqsubseteq U, \;\widehat{\;V\;}{}^{{}^{{\overline{{ \mathfrak{S}}}}}}\,\}=U$.\\
Let us now fix $\kappa\in {\overline{{ \mathfrak{S}}}}$.  Let us denote by $X_U$ the subset $\{\,\kappa \sqcap_{{}_{{\overline{{ \mathfrak{S}}}}}}\omega\;\vert\; \omega\in U\,\}$. We note that $\widehat{\;X_U\;}{}^{{}^{{\overline{{ \mathfrak{S}}}}}}$ because $(\kappa \sqcap_{{}_{{\overline{{ \mathfrak{S}}}}}}\chi) \sqsubseteq_{{}_{{\overline{{ \mathfrak{S}}}}}} \kappa$ for any $\chi\in U$ and we have then
\begin{eqnarray}
\bigsqcup{}^{{}^{{\overline{{ \mathfrak{S}}}}}} X_U & \sqsubseteq_{{}_{{\overline{{ \mathfrak{S}}}}}} & \kappa.\label{eq1pre}
\end{eqnarray}
We also note that
\begin{eqnarray}
\bigsqcup{}^{{}^{{\overline{{ \mathfrak{S}}}}}} X_U=
\bigsqcup{}^{{}^{{\overline{{ \mathfrak{S}}}}}}\{\,\kappa \sqcap_{{}_{{\overline{{ \mathfrak{S}}}}}}\chi\;\vert\; \chi\in U\,\} &\sqsubseteq & Max \{\, (\bigsqcup{}^{{}^{{\overline{{ \mathfrak{S}}}}}} V)\;\vert\; V\sqsubseteq U, \;\widehat{\;V\;}{}^{{}^{{\overline{{ \mathfrak{S}}}}}}\,\}=U.\;\;\;\;\;\;\;\;\;\;\;\;\;
\end{eqnarray}
In other words, 
\begin{eqnarray}
\exists \varphi \in U & \vert & \bigsqcup{}^{{}^{{\overline{{ \mathfrak{S}}}}}}\{\,\kappa \sqcap_{{}_{{\overline{{ \mathfrak{S}}}}}}\chi\;\vert\; \chi\in U\,\} \sqsubseteq_{{}_{{\overline{{ \mathfrak{S}}}}}} \varphi. \label{eq2pre}
\end{eqnarray}
From (\ref{eq1pre}) and (\ref{eq2pre}), we have then obtained
\begin{eqnarray}
\exists \varphi \in U & \vert & \bigsqcup{}^{{}^{{\overline{{ \mathfrak{S}}}}}}\{\,\kappa \sqcap_{{}_{{\overline{{ \mathfrak{S}}}}}}\chi\;\vert\; \chi\in U\,\} \sqsubseteq_{{}_{{\overline{{ \mathfrak{S}}}}}} (\kappa \sqcap_{{}_{{\overline{{ \mathfrak{S}}}}}} \varphi). 
\end{eqnarray}
This concludes the proof.
\end{proof}

%\begin{lemma}\label{lemmatricky2}
%We have also
%\begin{eqnarray}
%\forall U\in { \mathfrak{J}}_{\overline{{ \mathfrak{S}}}},&&\forall V,V'\sqsubseteq U,\;(\,\widehat{\;V\;}{}^{{}^{{ \mathfrak{S}}'}}\!\textit{\rm and}\;\widehat{\;V'\;}{}^{{}^{{ \mathfrak{S}}'}}\,)\;\Rightarrow \;(\,(\bigsqcup{}^{{}^{{ \mathfrak{S}}'}}V)^\star \not\sqsubseteq_{{}_{{ \mathfrak{S}}'}}(\bigsqcup{}^{{}^{{ \mathfrak{S}}'}}V')\,).\;\;\;\;\;\;\;\;\;\;\;\;\;\;
%\end{eqnarray}
%\end{lemma}
%\begin{proof}
%Let us fix $U\in { \mathfrak{J}}_{\overline{{ \mathfrak{S}}}}$ and let us consider $V,V'\sqsubseteq U$ such that $\widehat{\;V\;}{}^{{}^{{ \mathfrak{S}}'}}$ and $\widehat{\;V'\;}{}^{{}^{{ \mathfrak{S}}'}}$. Using Lemma \ref{lemmatricky}, we deduce immediately that there exists $\kappa,\kappa'\in U$ such that $(\bigsqcup{}^{{}^{{ \mathfrak{S}}'}}V)\sqsubseteq_{{}_{{ \mathfrak{S}}'}}\kappa$ and $(\bigsqcup{}^{{}^{{ \mathfrak{S}}'}}V')\sqsubseteq_{{}_{{ \mathfrak{S}}'}}\kappa'$. We know that $\kappa$ and $\kappa'$ admit a common upper-bound given by $(\bigsqcup{}^{{}^{{ \mathfrak{S}}}}U)$. Hence, using Lemma \ref{JSbarsecondcondition}, we deduce $(\bigsqcup{}^{{}^{{ \mathfrak{S}}'}}V)^\star \not\sqsubseteq_{{}_{{ \mathfrak{S}}'}}(\bigsqcup{}^{{}^{{ \mathfrak{S}}'}}V')$.
%\end{proof}

\subsection{Reconstruction of hidden states}\label{subsectionreconstruction}

We will now proceed in the inverse direction, and we will show that the set of hidden states can be reconstructed from the set of real states as soon as this set of real states satisfies some minimal requirements.

\begin{definition}\label{definitionrealspaceofstates}
We will define {\em a real space of states} to be a quadruple $({\mathfrak{S}}',{ \mathcal{Q}}({ \mathfrak{S}}'),cl^{{{ \mathfrak{S}}'}},\star)$ such that
\begin{itemize}
\item ${ \mathfrak{S}}'$ is a down complete Inf semi-lattice with bottom element $\bot_{{}_{ \mathfrak{S}'}}$.  We require ${ \mathfrak{S}}'$ to be generated by its maximal elements, i.e. 
 \begin{eqnarray}
\hspace{-1.5cm}&&\forall \sigma \in { \mathfrak{S}}', \;\; \sigma= \bigsqcap{}^{{}^{{ \mathfrak{S}}'}}  \underline{\sigma}_{{}_{{ \mathfrak{S}}'}} \;\;\textit{\rm where}\;\;
\underline{\sigma}_{{}_{ { \mathfrak{S}}'}}:=\{\, \sigma'\in { \mathfrak{S}'}{}^{{}^{pure}}\;\vert\; \sigma'\sqsupseteq_{{}_{{ \mathfrak{S}'}}} \sigma\;\}
\;\;\textit{\rm and}\;\; { \mathfrak{S}'}{}^{{}^{pure}}:=Max({ \mathfrak{S}'})\;\;\;\;\;\;\;\;\;\;\;\;\;\;
\label{RECcompletemeetirreducibleordergeneratingS}
\end{eqnarray}
\item if we denote by $({ \mathcal{P}}({ \mathfrak{S}}'),\sqsubseteq)$ the pre-ordered set defined by
\begin{eqnarray}
\forall I,I'\in { \mathcal{P}}({ \mathfrak{S}}'),&& I\sqsubseteq I' \;\;:\Leftrightarrow \;\; \forall \sigma\in I,\exists \sigma'\in I'\;\vert\; \sigma \sqsubseteq_{{}_{{{ \mathfrak{S}}'}}} \sigma'\label{RECDefinitionpreorderPSbar}
\end{eqnarray}
then ${ \mathcal{Q}}({ \mathfrak{S}}')$ is a given sub-poset of $({ \mathcal{P}}({ \mathfrak{S}}'),\sqsubseteq)$ satisfying
\begin{eqnarray}
&&\forall V\subseteq { \mathfrak{S}}',\;\; \widehat{\;V\;}{}^{{}^{{ \mathfrak{S}}'}}\;\;\Rightarrow\;\;V\in  { \mathcal{Q}}({ \mathfrak{S}}').\label{RECQ1}
%&&\forall J,J'\in { \mathcal{P}}({ \mathfrak{S}}'),\;\;(\,J\sqsubseteq J'\;\;\textit{\rm and}\;\; J'\in { \mathcal{Q}}({ \mathfrak{S}}')\,)\;\Rightarrow\; J\in { \mathcal{Q}}({ \mathfrak{S}}').\label{RECQ2}
\end{eqnarray}
\item $\star$ is a map from ${ \mathfrak{S}}'\smallsetminus \{\bot_{{}_{ \mathfrak{S}'}}\}$ to ${ \mathfrak{S}}'\smallsetminus \{\bot_{{}_{ \mathfrak{S}'}}\}$ satisfying
\begin{eqnarray}
\forall \sigma\in { \mathfrak{S}}'\smallsetminus \{\bot_{{}_{ \mathfrak{S}'}}\},&& (\sigma^{\star})^{\star}=\sigma, \label{RECinvolutive}\\
\forall \sigma_1,\sigma_2\in { \mathfrak{S}}'\smallsetminus \{\bot_{{}_{ \mathfrak{S}'}}\},&& \sigma_1\sqsubseteq_{{}_{{ \mathfrak{S}'}}}\sigma_2\;\;\Rightarrow\;\; \sigma_2^\star \sqsubseteq_{{}_{{ \mathfrak{S}'}}} \sigma_1^\star, \label{RECorderreversing}\\
%\forall \sigma\in { \mathfrak{S}}'\smallsetminus \{\bot_{{}_{ \mathfrak{S}'}}\},&&\neg\; \widehat{\sigma \sigma^\star}{}^{{}^{{ \mathfrak{S}'}}}. \label{RECinconsistent}
\forall J\in { \mathcal{Q}}({ \mathfrak{S}}'),\forall \sigma\in { \mathfrak{S}}', & & \{\sigma,\sigma^\star\}\not\sqsubseteq J. \label{RECinconsistent}
%\\ \forall \sigma,\sigma'\in { \mathfrak{S}}'\;\vert\; \sigma'\sqsubset_{{}_{{ \mathfrak{S}}'}}\sigma^\star & & \{\sigma',\sigma\}\in { \mathcal{Q}}({ \mathfrak{S}}').\label{RECinconsistent2}
\end{eqnarray}
\item $cl^{{{ \mathfrak{S}}'}}$ is a map defined from ${ \mathcal{P}}({ \mathfrak{S}}')$ to ${ \mathcal{P}}({ \mathfrak{S}}')$ and satisfying 
\begin{eqnarray}
\forall J\in {\mathcal{Q}}({ \mathfrak{S}}'),\;&& cl^{{ \mathfrak{S}}'}(J)\in {\mathcal{Q}}({ \mathfrak{S}}'),\label{basicRECQ}\\
\forall J\in { \mathcal{P}}({ \mathfrak{S}}'),&& J\sqsubseteq cl^{{{ \mathfrak{S}}'}}(J),\label{kuratowski1}\\
\forall J,J'\in { \mathcal{P}}({ \mathfrak{S}}'),&& J\sqsubseteq J' \;\;\Rightarrow\;\;cl^{{{ \mathfrak{S}}'}}(J)\sqsubseteq cl^{{{ \mathfrak{S}}'}}(J'),\label{kuratowski2}\\
\forall J\in { \mathcal{P}}({ \mathfrak{S}}'),&& cl^{{{ \mathfrak{S}}'}}\circ cl^{{{ \mathfrak{S}}'}}(J) = cl^{{{ \mathfrak{S}}'}}(J),\label{kuratowski3}\\
\forall \sigma\in { \mathfrak{S}}',&& cl^{{{ \mathfrak{S}}'}}(\{\sigma\})=\{\sigma\},\\
\forall J\in cl^{{{ \mathfrak{S}}'}}({ \mathcal{Q}}({ \mathfrak{S}}')),\forall \sigma,\sigma'\in J,&& \widehat{\sigma\sigma'}{}^{{}^{{{ \mathfrak{S}}'}}}\;\;\Rightarrow\;\; \sigma=\sigma' ,\label{RECdefJSbar2}\\
\forall J,J'\in { \mathcal{P}}({ \mathfrak{S}}'),&&(\,J\sqsubseteq J'\;\;\textit{\rm and}\;\; J'\in cl^{{{ \mathfrak{S}}'}}({ \mathcal{Q}}({ \mathfrak{S}}'))\,)\;\Rightarrow\; J\in { \mathcal{Q}}({ \mathfrak{S}}').\label{RECQ2}\;\;\;\;\;\;\;\;\;\;\;\;
\end{eqnarray}
\end{itemize}
\end{definition}
\noindent From here, we will denote by ${ \mathfrak{J}}_{{{ \mathfrak{S}}'}}$ the set $cl^{{{ \mathfrak{S}}'}}({ \mathcal{Q}}({ \mathfrak{S}}'))$.  The set ${ \mathfrak{J}}_{{{ \mathfrak{S}}'}}$ is naturally equipped with $\sqsubseteq$ which is a partial order on ${ \mathfrak{J}}_{{{ \mathfrak{S}}'}}$ because of the condition (\ref{RECdefJSbar2}).\\

We will adopt the following Inf semi-lattice structure on ${ \mathfrak{J}}_{{{ \mathfrak{S}}'}}$ :
\begin{eqnarray}
\forall J,J'\in { \mathfrak{J}}_{{{ \mathfrak{S}}'}}, \;\;\;\; J\sqcap_{{}_{{ \mathfrak{J}}_{{{ \mathfrak{S}}'}}}} J' &=&J\sqcap J'\\
&=& Max \{\, \sigma\sqcap_{{}_{{\overline{ \mathfrak{S}}}}}\sigma'\;\vert\; \sigma\in J,\sigma'\in J'\,\}.\;\;\;\;\;\;\;\;\;\;\;\;\;\;\;\label{RECdefJsqbarJ}
\end{eqnarray}
Indeed, we first note that $(J\sqcap J')\in { \mathcal{Q}}({ \mathfrak{S}}')$ because  $(J\sqcap J')\sqsubseteq J\in { \mathfrak{J}}_{{{ \mathfrak{S}}'}}$ and because of the property (\ref{RECQ2}). We can then form the entity $cl^{{{ \mathfrak{S}}'}}(J \sqcap J')$. Then, we have (i) $(J\sqcap J')\sqsubseteq J,J'$ implies, using (\ref{kuratowski2}) and the fact that $J$ and $J'$ in ${ \mathfrak{J}}_{{{ \mathfrak{S}}'}}$, $cl^{{{ \mathfrak{S}}'}}(J \sqcap J')\sqsubseteq J,J'$ and then $cl^{{{ \mathfrak{S}}'}}(J \sqcap J')\sqsubseteq J\sqcap J'$, (ii) using (\ref{kuratowski1}) we have also $cl^{{{ \mathfrak{S}}'}}(J \sqcap J')\sqsupseteq (J\sqcap J')$.  Now, we conclude by observing that $\forall \sigma \in cl^{{{ \mathfrak{S}}'}}(J \sqcap J'), \exists \kappa \in J\sqcap J'$ such that $\sigma\sqsubseteq_{{}_{{{ \mathfrak{S}}'}}}\kappa$ and for this $\kappa$ there exists $\omega\in cl^{{{ \mathfrak{S}}'}}(J \sqcap J')$ such that $\kappa\sqsubseteq_{{}_{{{ \mathfrak{S}}'}}}\omega$. But now we use (\ref{RECdefJSbar2}), to deduce that $\sigma=\kappa=\omega$. In other words, $cl^{{{ \mathfrak{S}}'}}(J \sqcap J')= J\sqcap J'$.  Therefore, $J\sqcap_{{}_{{ \mathfrak{J}}_{{{ \mathfrak{S}}'}}}} J'$ being the largest closed element lower-bounding for $\sqsubseteq$ the subsets $J$ and $J'$, we then deduce the first equality.\\
The second equality has already been proved. \\

%\\
%Now, let us consider $J''$ in ${ \mathfrak{J}}_{{{ \mathfrak{S}}'}}$ such that $J''\sqsubseteq_{{}_{{ \mathfrak{J}}_{{{ \mathfrak{S}}'}}}}J$ and $J''\sqsubseteq_{{}_{{ \mathfrak{J}}_{{{ \mathfrak{S}}'}}}}J'$. We have that, for any $\sigma''\in J''$, there exists $\sigma\in J$ and $\sigma'\in J'$ such that $\sigma''\sqsubseteq_{{}_{{{{ \mathfrak{S}}'}}}}\sigma$ and $\sigma''\sqsubseteq_{{}_{{{{ \mathfrak{S}}'}}}}\sigma'$, and then such that $\sigma''\sqsubseteq_{{}_{{{{ \mathfrak{S}}'}}}}(\sigma\sqcap_{{}_{{{{ \mathfrak{S}}'}}}}\sigma')$. As a result, we have $J''\sqsubseteq \{\, \sigma\sqcap_{{}_{{{ \mathfrak{S}}'}}}\sigma'\;\vert\; \sigma\in J,\sigma'\in J'\,\}$. In other words, $(J\sqcap J')$ is the largest element of ${ \mathfrak{J}}_{{{ \mathfrak{S}}'}}$ upper-bounded for $\sqsubseteq$ by $J$ and $J'$. This concludes the proof of the property (\ref{RECdefJsqbarJ}).\\

We note that ${ \mathfrak{S}}'$ can be injected as an Inf semi-lattice in ${ \mathfrak{J}}_{{{ \mathfrak{S}}'}}$ by the trivial map $\sigma\in { \mathfrak{S}}' \;\mapsto\; \{\sigma\}\in { \mathfrak{J}}_{{{ \mathfrak{S}}'}}$ (because of requirement (\ref{RECQ1})).  Explicitly, we have, for any $\sigma,\sigma'\in {{{ \mathfrak{S}}'}}$
\begin{eqnarray}
\{\sigma\}\sqcap_{{}_{{ \mathfrak{J}}_{{{ \mathfrak{S}}'}}}} \{\sigma'\} & = & \{\sigma\sqcap_{{}_{{{ \mathfrak{S}}'}}}\sigma'\}.
\end{eqnarray}
In the following, we will denote shortly by $\sigma$ the element $\{\sigma\}$ of ${ \mathfrak{J}}_{{{ \mathfrak{S}}'}}$ if necessary.\\
The bottom element in ${ \mathfrak{J}}_{{{ \mathfrak{S}}'}}$ is simply given by 
\begin{eqnarray}
\bot_{{}_{{ \mathfrak{J}}_{{{ \mathfrak{S}}'}}}}=\bot_{{}_{{{{ \mathfrak{S}}'}}}}.
\end{eqnarray}

${ \mathfrak{J}}_{{{ \mathfrak{S}}'}}$ being a down complete Inf semi-lattice, the supremum of a collection of elements of ${ \mathfrak{J}}_{{{ \mathfrak{S}}'}}$ can be defined as long as they admit a common upper-bound. 
More explicitly, let $J$ and $J'$ be two elements of ${ \mathfrak{J}}_{{{ \mathfrak{S}}'}}$ such that $\exists J''\in {{{ \mathfrak{J}}_{{{ \mathfrak{S}}'}}}}$ with $J\sqsubseteq J''$ and $J'\sqsubseteq J''$, then the supremum of $J$ and $J'$ in ${{ \mathfrak{J}}_{{{ \mathfrak{S}}'}}}$ (denoted $J\sqcup_{{}_{{ \mathfrak{J}}_{{{ \mathfrak{S}}'}}}} J'$) exists in ${{ \mathfrak{J}}_{{{ \mathfrak{S}}'}}}$.  \\
We note that this supremum is simply given by :
\begin{eqnarray}
J\sqcup_{{}_{{ \mathfrak{J}}_{{{ \mathfrak{S}}'}}}} J' &=& cl^{{{ \mathfrak{S}}'}}(J \cup J').\label{supremaJ}
\end{eqnarray}
Indeed, we note first that, because $J\sqsubseteq J''$ and $J'\sqsubseteq J''$ we have $(J \cup J')\sqsubseteq J''$ and using $J''\in {{{ \mathfrak{J}}_{{{ \mathfrak{S}}'}}}}$ and property (\ref{RECQ2}) we deduce that $(J \cup J')\in { \mathcal{Q}}({ \mathfrak{S}}')$. We can then form the entity $cl^{{{ \mathfrak{S}}'}}(J \cup J')$. Now, we observe that, if $J\sqsubseteq J''$ and $J'\sqsubseteq J''$, then $cl^{{{ \mathfrak{S}}'}}(J \cup J')\sqsubseteq cl^{{{ \mathfrak{S}}'}}(J'')=J''$. This concludes the proof of the formula (\ref{supremaJ}).\\

We then denote by ${\mathfrak{E}}_{{ \mathfrak{J}}_{{{ \mathfrak{S}}'}}}$ the generalized effect space associated to ${ \mathfrak{J}}_{{{ \mathfrak{S}}'}}$ and defined according to Definition \ref{defgeneralizedeffectspace}. The Inf semi-lattice structure on ${\mathfrak{E}}_{{ \mathfrak{J}}_{{{ \mathfrak{S}}'}}}$ is defined according to (\ref{defcapES}).\\

Let us also define $\overline{ \mathfrak{E}}_{{ \mathfrak{J}}_{{{ \mathfrak{S}}'}}}$ to be the sub Inf semi-lattice of ${ \mathfrak{E}}_{{ \mathfrak{J}}_{{{ \mathfrak{S}}'}}}$ formed by the elements of the following set
\begin{eqnarray}
\hspace{-0.5cm}\{\,{ \mathfrak{l}}_{(\sigma,\sigma')}\;\vert\; \sigma,\sigma'\in { \mathfrak{S}}'\smallsetminus \{\bot_{{}_{{ \mathfrak{S}'}}}\}, \sigma'\sqsupseteq_{{}_{{ \mathfrak{S}'}}} \sigma^\star\,\} \cup \{ { \mathfrak{l}}_{{}_{(\sigma,\centerdot)}}\;\vert\; \sigma\in { \mathfrak{S}'}\;\}\cup \{ { \mathfrak{l}}_{{}_{(\centerdot,\sigma)}}\;\vert\; \sigma\in { \mathfrak{S}'}\;\}\cup \{\, { \mathfrak{l}}_{{}_{(\centerdot,\centerdot)}}\,\}\;\;\;\;\;\label{reducedspaceofeffectsprime}
\end{eqnarray}
We note that, for any $\sigma,\sigma'\in { \mathfrak{S}}'\smallsetminus \{\bot_{{}_{ \mathfrak{S}'}}\}$ such that $\sigma'\sqsupseteq_{{}_{{ \mathfrak{S}'}}}\sigma^\star$, the effect ${ \mathfrak{l}}_{{}_{(\sigma,\sigma')}}$ is well defined because of the property $\neg\; \widehat{\sigma \sigma^\star}{}^{{}^{{ \mathfrak{J}}_{ \mathfrak{S}'}}}$ (see condition (\ref{RECinconsistent})).\\

\noindent Endly, we can define an evaluation map denoted $\epsilon^{{ \mathfrak{J}}_{{{ \mathfrak{S}}'}}}$ as in (\ref{defepsilonS}).\\
%We note that the evaluation map $\epsilon^{{ \mathfrak{J}}_{{{ \mathfrak{S}}'}}}$ realizes an extension of $\epsilon^{{ \mathfrak{S}}'}$ :
%\begin{eqnarray}
%\forall \sigma\in { \mathfrak{S}}',\forall S\in { \mathfrak{J}}_{{{ \mathfrak{S}}'}},&& \epsilon^{{ \mathfrak{J}}_{{{ \mathfrak{S}}'}}}_{{ \mathfrak{l}}_{(\sigma,\sigma^\star)}}(S):=\bigvee{}_{\!\!\!\omega\in S}\;\epsilon^{{ \mathfrak{S}}'}_{{ \mathfrak{l}}_{(\sigma,\sigma^\star)}}(\omega).\label{RECepsilonext}
%\end{eqnarray}
%It is important to note that there is no problem in defining the supremum on the right handside simply because the defining property (\ref{RECdefJSbar3}) implies
%\begin{eqnarray}
%\forall S\in { \mathfrak{J}}_{{{ \mathfrak{S}}'}},&&(\, \epsilon^{{ \mathfrak{S}}'}_{{ \mathfrak{l}}}(S)\subseteq \{\bot,\textit{\bf Y}\}\;\;\textit{\rm or}\;\;\epsilon^{{ \mathfrak{S}}'}_{{ \mathfrak{l}}}(S)\subseteq \{\bot,\textit{\bf N}\}\,).
%\end{eqnarray}

%\vspace{0.5cm}

The homomorphic properties of the evaluation map, i.e.
\begin{eqnarray}
\forall \{\,{ \mathfrak{l}}_i\;\vert\; i\in I\,\}\subseteq \overline{ \mathfrak{E}}_{{ \mathfrak{J}}_{{{ \mathfrak{S}}'}}},\forall J\in { \mathfrak{J}}_{{{ \mathfrak{S}}'}},&& \epsilon^{{ \mathfrak{J}}_{{{ \mathfrak{S}}'}}}_{\bigsqcap{}_{i\in I}^{{\overline{ \mathfrak{E}}_{{ \mathfrak{J}}_{{{ \mathfrak{S}}'}}}}}{ \mathfrak{l}}_i}(J)=\bigwedge{}_{\!\!i\in I}\;\epsilon^{{ \mathfrak{J}}_{{{ \mathfrak{S}}'}}}_{{ \mathfrak{l}}_i}(J),\\
\forall { \mathfrak{l}}\in \overline{ \mathfrak{E}}_{{ \mathfrak{J}}_{{{ \mathfrak{S}}'}}},\forall \{\,J_i\;\vert\;i\in I\,\}\subseteq { \mathfrak{J}}_{{{ \mathfrak{S}}'}},&& \epsilon^{{ \mathfrak{J}}_{{{ \mathfrak{S}}'}}}_{{ \mathfrak{l}}}(\bigsqcap{}^{{}^{{ \mathfrak{J}}_{{{ \mathfrak{S}}'}}}}_{i\in I}J_i)=\bigwedge{}_{\!\!i\in I}\;\epsilon^{{ \mathfrak{J}}_{{{ \mathfrak{S}}'}}}_{{ \mathfrak{l}}}(J_i),
\end{eqnarray}
are trivially true (this is a consequence of the construction of subsection \ref{subsectiongeneralized}).\\

The property (\ref{realChuextensional}), i.e.
\begin{eqnarray}
&& \forall { \mathfrak{l}},{ \mathfrak{l}}'\in \overline{ \mathfrak{E}}_{{ \mathfrak{J}}_{{{ \mathfrak{S}}'}}},\;\;\;\;\;\;\;\;\;\;\;\;(\, \forall S\in { \mathfrak{J}}_{{{ \mathfrak{S}}'}},\; { \epsilon}^{{ \mathfrak{J}}_{{{ \mathfrak{S}}'}}}_{ { \mathfrak{l}}}(S)={ \epsilon}^{{ \mathfrak{J}}_{{{ \mathfrak{S}}'}}}_{ { \mathfrak{l}}'}(S) \,) \Leftrightarrow  (\, { \mathfrak{l}}= { \mathfrak{l}}' \,),\label{RECChuextensional}
\end{eqnarray}
is also trivially true (this is a consequence of the construction of subsection \ref{subsectiongeneralized}).\\

We have to check now the property (\ref{realChuseparated}), i.e.
\begin{eqnarray}
&& \forall J,J'\in { \mathfrak{J}}_{{{ \mathfrak{S}}'}},\;\;\;\;\;\;\;\;\;\;\;\;(\, \forall { \mathfrak{l}}\in \overline{ \mathfrak{E}}_{{ \mathfrak{J}}_{{{ \mathfrak{S}}'}}},\; { \epsilon}^{{ \mathfrak{J}}_{{{ \mathfrak{S}}'}}}_{ { \mathfrak{l}}}(J)={ \epsilon}^{{ \mathfrak{J}}_{{{ \mathfrak{S}}'}}}_{ { \mathfrak{l}}}(J') \,) \Leftrightarrow  (\, J= J' \,).\label{RECChuseparated}
\end{eqnarray}
Let us fix $J,J'\in { \mathfrak{J}}_{{{ \mathfrak{S}}'}}$, we can exploit the property $\forall { \mathfrak{l}}\in \overline{ \mathfrak{E}}_{{ \mathfrak{J}}_{{{ \mathfrak{S}}'}}},\; { \epsilon}^{{ \mathfrak{J}}_{{{ \mathfrak{S}}'}}}_{ { \mathfrak{l}}}(J)={ \epsilon}^{{ \mathfrak{J}}_{{{ \mathfrak{S}}'}}}_{ { \mathfrak{l}}}(J')$ by choosing successively all the effects ${ \mathfrak{l}}_{(\sigma,\sigma^\star)}$ for the different elements $\sigma\in J$.  For any $\sigma\in J$ we have ${ \epsilon}^{{ \mathfrak{J}}_{{{ \mathfrak{S}}'}}}_{ { \mathfrak{l}}_{(\sigma,\sigma^\star)}}(J)=\textit{\bf Y}$ and then ${ \epsilon}^{{{{ \mathfrak{S}}'}}}_{ { \mathfrak{l}}_{(\sigma,\sigma^\star)}}(J')=\textit{\bf Y}$, i.e.  there exists $\sigma'\in J'$ such that $\sigma\sqsubseteq_{{}_{{ \mathfrak{S}}'}}\sigma'$.  As a result, we then obtain $J\sqsubseteq J'$.  By choosing now all the effects ${ \mathfrak{l}}_{(\sigma,\sigma^\star)}$ for the different elements $\sigma\in J'$, we obtain in the same way $J\sqsubseteq J'$. As a conclusion, $J=J'$ (because ${ \mathfrak{J}}_{{{ \mathfrak{S}}'}}$ is a partially ordered set).\\

Endly, we observe that $\neg \widehat{\sigma \sigma^\star}{}^{{}^{{ \mathfrak{J}}_{{{ \mathfrak{S}}'}}}}$ because of condition (\ref{RECinconsistent}).\\
%\begin{itemize}
%\item $\neg \widehat{\sigma \sigma^\star}{}^{{}^{{ \mathfrak{J}}_{{{ \mathfrak{S}}'}}}}$ because of condition (\ref{RECinconsistent}).
%\item For any $\sigma''\sqsubset_{{}_{{{{ \mathfrak{S}}'}}}}\sigma$, there exists an element $cl^{{{ \mathfrak{S}}'}}(\{\sigma'',\sigma^\star\})$ in ${ \mathfrak{J}}_{{{ \mathfrak{S}}'}}$ which is a common upper-bound of $\sigma''$ and $\sigma^\star$ (because of condition \ref{RECinconsistent2}), i.e.  $\widehat{\sigma'' \sigma^\star}{}^{{}^{{ \mathfrak{J}}_{{{ \mathfrak{S}}'}}}}$.
%\item for any $\sigma''\sqsubset_{{}_{{{{ \mathfrak{S}}'}}}}\sigma^\star$ there exists an element $cl^{{{ \mathfrak{S}}'}}(\{\sigma'',\sigma\})$ in ${ \mathfrak{J}}_{{{ \mathfrak{S}}'}}$ which is a common upper-bound of $\sigma''$ and $\sigma$ (because of condition \ref{RECinconsistent2}), i.e.  $\widehat{\sigma'' \sigma}{}^{{}^{{ \mathfrak{J}}_{{{ \mathfrak{S}}'}}}}$.
%\end{itemize}
%As a conclusion, we have then $\sigma {}{\bowtie}_{{}^{{ \mathfrak{J}}_{{{ \mathfrak{S}}'}}}}\sigma^\star$, which is the announced condition (\ref{starcomplement}).\\

We then conclude on the following theorem establishing the way to reconstruct the set of hidden states from the set of real states.
\begin{theorem}\label{theoremcompletion}
We will assume that $({\mathfrak{S}}',{ \mathcal{Q}}({ \mathfrak{S}}'),cl^{{{ \mathfrak{S}}'}},\star)$ is a real space of states.\\
We define ${ \mathfrak{J}}_{{{ \mathfrak{S}}'}}$ to be the set $cl^{{{ \mathfrak{S}}'}}({ \mathcal{Q}}({ \mathfrak{S}}'))$. The Inf semi-lattice structure being given by (\ref{RECdefJsqbarJ}). We define the generalized space of effects as in Definition \ref{defgeneralizedeffectspace} (it is necessary to use the formula for the suprema established in (\ref{supremaJ}) ). The reduced space of effects is defined as in (\ref{reducedspaceofeffectsprime}). Then, we define an evaluation map ${ \epsilon}^{{ \mathfrak{J}}_{{{ \mathfrak{S}}'}}}$ on the whole ${ \mathfrak{J}}_{{{ \mathfrak{S}}'}}$ as in (\ref{defepsilonS}). %This defines ${ \epsilon}^{{ \mathfrak{J}}_{{{ \mathfrak{S}}'}}}$ as an extension of ${ \epsilon}^{{{{ \mathfrak{S}}'}}}$ (see the formula (\ref{RECepsilonext})). 
\\
As a result, ${ \mathfrak{J}}_{{{ \mathfrak{S}}'}}$ is a well defined space of states which admits the real structure $({{{ \mathfrak{S}}'}},\star)$. The elements of ${{{ \mathfrak{S}}'}}$ are the real states, whereas the elements of 
  ${ \mathfrak{J}}_{{{ \mathfrak{S}}'}}\smallsetminus {{{ \mathfrak{S}}'}}$ are the hidden states.
\end{theorem}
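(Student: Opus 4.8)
The plan is to unwind the two claims hidden in the statement --- that $({ \mathfrak{J}}_{{{ \mathfrak{S}}'}},\sqcap_{{}_{{ \mathfrak{J}}_{{{ \mathfrak{S}}'}}}})$ meets Definition~\ref{definitiongeneralizedspaceofstates} (a down complete Inf semi-lattice with a bottom element) and that the pair $({{{ \mathfrak{S}}'}},\star)$, with ${ \mathfrak{S}}'$ identified with the singletons $\{\,\{\sigma\}\;\vert\;\sigma\in{ \mathfrak{S}}'\,\}$ sitting inside ${ \mathfrak{J}}_{{{ \mathfrak{S}}'}}$, meets every bullet of Definition~\ref{definitionstarstructure} --- and to observe that essentially all the computation has already been carried out in the paragraphs between Definition~\ref{definitionrealspaceofstates} and the statement. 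I would therefore present the proof as a checklist pointing at those intermediate results, filling only the few gaps they leave.

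For the space-of-states part, $\sqsubseteq$ is a genuine partial order on ${ \mathfrak{J}}_{{{ \mathfrak{S}}'}}$ by condition~(\ref{RECdefJSbar2}); binary infima are given by~(\ref{RECdefJsqbarJ}) together with the argument displayed just after it that $cl^{{{ \mathfrak{S}}'}}(J\sqcap J')=J\sqcap J'$; and the bottom element is $\{\bot_{{}_{{{{ \mathfrak{S}}'}}}}\}$. Down-completeness I would obtain by replaying that binary argument for an arbitrary family $\{\,J_i\;\vert\;i\in I\,\}\subseteq{ \mathfrak{J}}_{{{ \mathfrak{S}}'}}$: the pointwise meet of the $J_i$ formed inside $({ \mathcal{P}}({ \mathfrak{S}}'),\sqsubseteq)$ lies $\sqsubseteq$-below any fixed $J_{i_0}\in{ \mathfrak{J}}_{{{ \mathfrak{S}}'}}$, hence belongs to ${ \mathcal{Q}}({ \mathfrak{S}}')$ by~(\ref{RECQ2}); applying $cl^{{{ \mathfrak{S}}'}}$ and using~(\ref{kuratowski1})--(\ref{kuratowski3}) together with~(\ref{RECdefJSbar2}) one checks that the outcome is a $cl^{{{ \mathfrak{S}}'}}$-closed greatest lower bound, exactly as in the two-element case. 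The generalized effect space, the formula~(\ref{supremaJ}) for bounded suprema, the reduced effect space~(\ref{reducedspaceofeffectsprime}) and the evaluation map~(\ref{defepsilonS}) are then imported verbatim from Definition~\ref{defgeneralizedeffectspace} and Subsection~\ref{subsectiongeneralized}.

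For the real-structure part, $\sigma\mapsto\{\sigma\}$ is an order embedding of ${ \mathfrak{S}}'$ onto $\{\,J\in{ \mathfrak{J}}_{{{ \mathfrak{S}}'}}\;\vert\;Card(J)=1\,\}$ (the injection recorded just before the statement; every $\{\sigma\}$ lands in ${ \mathfrak{J}}_{{{ \mathfrak{S}}'}}$ because $\{\sigma\}\in{ \mathcal{Q}}({ \mathfrak{S}}')$ and $cl^{{{ \mathfrak{S}}'}}(\{\sigma\})=\{\sigma\}$), so the partial order, the maximal elements and all infima of elements of ${ \mathfrak{S}}'$ are the same whether read in ${ \mathfrak{S}}'$ or in ${ \mathfrak{J}}_{{{ \mathfrak{S}}'}}$; in particular $\bigsqcap{}^{{}^{{ \mathfrak{J}}_{{{ \mathfrak{S}}'}}}}$ of a family of singletons is again a singleton, which gives closure of ${ \mathfrak{S}}'$ under infima, while $\bot_{{}_{{ \mathfrak{J}}_{{{ \mathfrak{S}}'}}}}=\{\bot_{{}_{{{{ \mathfrak{S}}'}}}}\}\in{ \mathfrak{S}}'$ and the requirement that ${ \mathfrak{S}}'$ be generated by its maximal elements is condition~(\ref{RECcompletemeetirreducibleordergeneratingS}) transported along the embedding. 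Involutivity and order reversal of $\star$ are~(\ref{RECinvolutive})--(\ref{RECorderreversing}), and $\neg\;\widehat{\sigma\sigma^\star}{}^{{}^{{ \mathfrak{J}}_{{{ \mathfrak{S}}'}}}}$ holds because a common upper bound $J$ of $\{\sigma\}$ and $\{\sigma^\star\}$ in ${ \mathfrak{J}}_{{{ \mathfrak{S}}'}}$ would force $\{\sigma,\sigma^\star\}\sqsubseteq J$, contradicting~(\ref{RECinconsistent}). Finally the reduced effect space $\overline{ \mathfrak{E}}_{{ \mathfrak{J}}_{{{ \mathfrak{S}}'}}}$ of~(\ref{reducedspaceofeffectsprime}) is exactly the set prescribed by~(\ref{reducedspaceofeffects}) with $\overline{ \mathfrak{S}}={ \mathfrak{S}}'$; the homomorphy and extensionality properties~(\ref{axiomsigmainfsemilattice})--(\ref{Chuextensional}) of $\epsilon^{{ \mathfrak{J}}_{{{ \mathfrak{S}}'}}}$ are automatic from the generic construction; and the separation property~(\ref{realChuseparated}) restricted to $\overline{ \mathfrak{E}}_{{ \mathfrak{J}}_{{{ \mathfrak{S}}'}}}$ is precisely~(\ref{RECChuseparated}), which was established above by testing the effects ${ \mathfrak{l}}_{(\sigma,\sigma^\star)}$ for $\sigma\in J$. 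With every bullet checked, ${ \mathfrak{S}}'$ is the set of real states and ${ \mathfrak{J}}_{{{ \mathfrak{S}}'}}\smallsetminus{ \mathfrak{S}}'$ (the closed subsets that are not singletons) is the set of hidden states.

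The only genuinely new step, and thus the main thing to get right, is the down-completeness of ${ \mathfrak{J}}_{{{ \mathfrak{S}}'}}$ for infinite families: one must see both that the candidate infimum built in ${ \mathcal{P}}({ \mathfrak{S}}')$ becomes $cl^{{{ \mathfrak{S}}'}}$-closed (so that it is an honest element of ${ \mathfrak{J}}_{{{ \mathfrak{S}}'}}$) and that it is order-theoretically the greatest lower bound, and this is where~(\ref{RECQ2}), the Kuratowski-type axioms~(\ref{kuratowski1})--(\ref{kuratowski3}) and the antisymmetry input~(\ref{RECdefJSbar2}) have to be played together. A secondary subtlety worth flagging is that a supremum in ${ \mathfrak{J}}_{{{ \mathfrak{S}}'}}$ of two real states need not be real --- it is $cl^{{{ \mathfrak{S}}'}}(\{\sigma_1,\sigma_2\})$, which is not a singleton whenever $\sigma_1$ and $\sigma_2$ have no common upper bound in ${ \mathfrak{S}}'$ --- so $\overline{ \mathfrak{E}}_{{ \mathfrak{J}}_{{{ \mathfrak{S}}'}}}$ should be read as the sub Inf semi-lattice \emph{generated by}~(\ref{reducedspaceofeffectsprime}); this is harmless, because separation need only be tested against the generating effects ${ \mathfrak{l}}_{(\sigma,\sigma^\star)}$, and that is exactly what~(\ref{RECChuseparated}) supplies.
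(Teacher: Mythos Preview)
Your proposal is correct and follows essentially the same approach as the paper: the paper treats this theorem as a summary statement with no separate proof block, the verification having been carried out in the paragraphs between Definition~\ref{definitionrealspaceofstates} and the theorem. Your explicit checklist matches those paragraphs point for point, and you usefully flag two items the paper glosses over --- the extension of the infimum argument from binary to arbitrary families, and the sub-Inf-semi-lattice reading of~(\ref{reducedspaceofeffectsprime}) --- both of which are handled by the same axioms you cite.
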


Let us now establish a result showing that there exists no hidden states in the completion ${ \mathfrak{J}}_{{{ \mathfrak{S}}'}}$ when ${\mathfrak{S}}'$ is a simplex.

\begin{theorem}\label{completionsimplex}
Let $({\mathfrak{S}}',{ \mathcal{Q}}({ \mathfrak{S}}'),cl^{{{ \mathfrak{S}}'}},\star)$ be a real space of states. We will assume that ${\mathfrak{S}}'$ is a simplex space of states. Then, ${ \mathfrak{J}}_{{{ \mathfrak{S}}'}}={\mathfrak{S}}'$.
\end{theorem}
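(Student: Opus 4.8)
The plan is to obtain the statement as a corollary of the converse half of the theorem on simplex real structures established above. By Theorem~\ref{theoremcompletion}, the completion ${ \mathfrak{J}}_{{{ \mathfrak{S}}'}}=cl^{{{ \mathfrak{S}}'}}({ \mathcal{Q}}({ \mathfrak{S}}'))$ is a genuine space of states admitting $({{ \mathfrak{S}}'},\star)$ as a real structure, the real states being exactly the singletons $\{\sigma\}$ with $\sigma\in {{ \mathfrak{S}}'}$, and the hidden states being the non-singleton elements. Since ${ \mathfrak{S}}'$ is assumed to be a simplex space of states, that earlier theorem applies with ${ \mathfrak{S}}:={ \mathfrak{J}}_{{{ \mathfrak{S}}'}}$ and $\overline{ \mathfrak{S}}:={{ \mathfrak{S}}'}$ and forces ${ \mathfrak{J}}_{{{ \mathfrak{S}}'}}={{ \mathfrak{S}}'}$; in other words, no hidden state is produced by the completion. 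So in principle the proof is one line.

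For a self-contained argument I would argue by contradiction. Assume some $J\in { \mathfrak{J}}_{{{ \mathfrak{S}}'}}$ is not a singleton; by property~(\ref{RECdefJSbar2}) its distinct elements are pairwise unbounded above in ${ \mathfrak{S}}'$, so in particular $\bot_{{}_{{ \mathfrak{S}}'}}\notin J$ and one may pick $\sigma\neq \sigma'$ in $J$ with $\neg\,\widehat{\sigma\sigma'}{}^{{}^{{ \mathfrak{S}}'}}$. A common pure majorant of $\sigma$ and $\sigma'$ would be a common upper bound, hence $\underline{\sigma}_{{}_{{ \mathfrak{S}}'}}\cap \underline{\sigma'}_{{}_{{ \mathfrak{S}}'}}=\varnothing$, i.e. $\underline{\sigma'}_{{}_{{ \mathfrak{S}}'}}\subseteq { \mathfrak{S}'}{}^{{}^{pure}}\smallsetminus \underline{\sigma}_{{}_{{ \mathfrak{S}}'}}$. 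On a simplex the involution is forced to be the canonical complement~(\ref{starsimplex}), so $\sigma^\star=\bigsqcap{}^{{}^{{ \mathfrak{S}}'}}({ \mathfrak{S}'}{}^{{}^{pure}}\smallsetminus \underline{\sigma}_{{}_{{ \mathfrak{S}}'}})\sqsubseteq_{{}_{{ \mathfrak{S}}'}}\bigsqcap{}^{{}^{{ \mathfrak{S}}'}}\underline{\sigma'}_{{}_{{ \mathfrak{S}}'}}=\sigma'$ by antitonicity of $\bigsqcap{}^{{}^{{ \mathfrak{S}}'}}$. Since $\sigma\in J$ and $\sigma^\star\sqsubseteq_{{}_{{ \mathfrak{S}}'}}\sigma'\in J$, this yields $\{\sigma,\sigma^\star\}\sqsubseteq J$ for the preorder~(\ref{RECDefinitionpreorderPSbar}); but $J\in { \mathcal{Q}}({ \mathfrak{S}}')$ by~(\ref{basicRECQ}), which contradicts the inconsistency axiom~(\ref{RECinconsistent}). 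Therefore every element of ${ \mathfrak{J}}_{{{ \mathfrak{S}}'}}$ is a singleton, and ${ \mathfrak{J}}_{{{ \mathfrak{S}}'}}={{ \mathfrak{S}}'}$.

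The only point requiring real care — essentially the only place where the simplex hypothesis is used beyond Lemma~\ref{lemmaUsimplex} — is the identification of the given $\star$ with the canonical complement~(\ref{starsimplex}). This reduces to checking that the real-space-of-states axioms do supply an order-reversing involution with $\neg\,\widehat{\sigma\sigma^\star}{}^{{}^{{ \mathfrak{S}}'}}$: indeed $\neg\,\widehat{\sigma\sigma^\star}{}^{{}^{{ \mathfrak{S}}'}}$ follows from~(\ref{RECinconsistent}) applied to a principal $J=\{\tau\}$ (which lies in ${ \mathcal{Q}}({ \mathfrak{S}}')$ by~(\ref{RECQ1})), after which the argument from the proof of the simplex theorem above applies verbatim. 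The remaining items (the separate, trivial treatment of $\bot_{{}_{{ \mathfrak{S}}'}}$, and the passage between closed elements of ${ \mathfrak{J}}_{{{ \mathfrak{S}}'}}$ and the preorder on subsets of ${ \mathfrak{S}}'$) are routine bookkeeping.
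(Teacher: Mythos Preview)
Your proof is correct and follows essentially the same approach as the paper: pick two incompatible elements $\sigma,\sigma'$ in a would-be non-singleton $J$, use the simplex structure to force $\sigma^\star\sqsubseteq_{{}_{{ \mathfrak{S}}'}}\sigma'$, and contradict~(\ref{RECinconsistent}). You are in fact more careful than the paper in two places --- deriving $\neg\,\widehat{\sigma\sigma^\star}{}^{{}^{{ \mathfrak{S}}'}}$ from~(\ref{RECinconsistent}) applied to singletons (the paper's Definition~\ref{definitionrealspaceofstates} does not list it as an axiom), and invoking~(\ref{basicRECQ}) to get $J\in{ \mathcal{Q}}({ \mathfrak{S}}')$ --- but the substance is the same.
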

\begin{proof}
Let us begin by recalling that, as long as ${\mathfrak{S}}'$ is a simplex space of states, the star map $\star$ is necessarily given by (\ref{starsimplex}).\\
Let us consider two elements $\sigma$ and $\sigma'$ in ${\mathfrak{S}}'$ such that $\neg \;\widehat{\sigma \sigma'}{}^{{}^{{\mathfrak{S}}'}}$.  Let us suppose that $\{\sigma,\sigma'\}\in { \mathcal{Q}}({ \mathfrak{S}}')$.\\
As long as ${\mathfrak{S}}'$ is a simplex, we must necessarily have $\sigma^\star \sqsubseteq_{{}_{{\mathfrak{S}}'}} \sigma'$, i.e. $\{\sigma,\sigma^\star\}\sqsubseteq \{\sigma,\sigma'\}$.  As a consequence, we have $\{\sigma,\sigma^\star\}\in { \mathcal{Q}}({ \mathfrak{S}}')$ which contradicts the condition (\ref{RECinconsistent}). \\ 
As a conclusion, if ${\mathfrak{S}}'$ is a simplex space of states, then ${ \mathfrak{J}}_{{{ \mathfrak{S}}'}}\smallsetminus {\mathfrak{S}}'$ is empty.
\end{proof}

\subsection{Ontic completions}\label{subsectioncompleterealspaceofstates}

For the rest of this subsection, we will assume that ${ \mathfrak{S}}'$ satisfies the Finite Rank Condition (Definition \ref{FiniteRankCondition}).\\

For the rest of the present subsection, we will consider $({ \mathfrak{S}}',\star)$ such that
\begin{itemize}
\item ${ \mathfrak{S}}'$ is a down complete Inf semi-lattice with bottom element $\bot_{{}_{ \mathfrak{S}'}}$, such that  ${ \mathfrak{S}}'$ is generated by its maximal elements, i.e. 
 \begin{eqnarray}
\hspace{-1.5cm}&&\forall \sigma \in { \mathfrak{S}}', \;\; \sigma= \bigsqcap{}^{{}^{{ \mathfrak{S}}'}}  \underline{\sigma}_{{}_{{ \mathfrak{S}}'}} \;\;\textit{\rm where}\;\;
\underline{\sigma}_{{}_{ { \mathfrak{S}}'}}:=\{\, \sigma'\in { \mathfrak{S}'}^{{}^{pure}}\;\vert\; \sigma'\sqsupseteq_{{}_{{ \mathfrak{S}'}}} \sigma\;\}
\;\;\textit{\rm and}\;\; { \mathfrak{S}'}^{{}^{pure}}:=Max({ \mathfrak{S}'})\;\;\;\;\;\;\;\;\;\;\;\;\;\;
\end{eqnarray}
\item $\star$ is a map from ${ \mathfrak{S}}'\smallsetminus \{\bot_{{}_{ \mathfrak{S}'}}\}$ to ${ \mathfrak{S}}'\smallsetminus \{\bot_{{}_{ \mathfrak{S}'}}\}$ satisfying
\begin{eqnarray}
\forall \sigma\in { \mathfrak{S}}'\smallsetminus \{\bot_{{}_{ \mathfrak{S}'}}\},&& (\sigma^{\star})^{\star}=\sigma, \label{propstar1}\\
\forall \sigma_1,\sigma_2\in { \mathfrak{S}}'\smallsetminus \{\bot_{{}_{{ \mathfrak{S}'}}}\},&& \sigma_1\sqsubseteq_{{}_{{ \mathfrak{S}'}}}\sigma_2\;\;\Rightarrow\;\; \sigma_2^\star \sqsubseteq_{{}_{{ \mathfrak{S}'}}} \sigma_1^\star,\label{propstar2}\\
\forall \sigma\in { \mathfrak{S}}'\smallsetminus \{\bot_{{}_{ \mathfrak{S}'}}\},&& \neg \;\widehat{\sigma^{\star}\sigma}{}^{{}^{{ \mathfrak{S}'}}}.\label{propstar3}
\end{eqnarray}
\end{itemize}

We then intent to build the missing elements ${\mathcal{Q}}({ \mathfrak{S}}')$ and $cl^{{ \mathfrak{S}}'}$ verifying the other requirements in Definition \ref{definitionrealspaceofstates}.\\

Let us begin with our first basic definition. 

\begin{definition}
We will consider a map from ${\mathcal{P}}({ \mathfrak{S}}')$ to ${\mathcal{P}}({ \mathfrak{S}}')$ denoted ${\mathfrak{c}}^{{ \mathfrak{S}}'}$ and defined by
\begin{eqnarray}
{\mathfrak{c}}^{{ \mathfrak{S}}'}(U)&:=&Max \{\,\bigsqcup{}^{{}^{{ \mathfrak{S}}'}} V\;\vert\; V\sqsubseteq U\;\;\textit{\rm and}\;\; \widehat{\;V\;}{}^{{}^{{ \mathfrak{S}}'}}\,\}\;\;\;\;\;\;\;\;\;\;\;\;\;\;\label{definitioncompleterealspaceofstates2}
\end{eqnarray}
(here, we are directly inspired by the Lemma \ref{lemmatricky}).
\end{definition}

\begin{lemma}
${\mathfrak{c}}^{{ \mathfrak{S}}'}$ is {\em a pre-closure}. In other words,
\begin{eqnarray}
\forall U\in {\mathcal{P}}({ \mathfrak{S}}'),\;&& U\sqsubseteq {\mathfrak{c}}^{{ \mathfrak{S}}'}(U),\label{propB3}\\
\forall U_1,U_2\in {\mathcal{P}}({ \mathfrak{S}}'),\;&& U_1\sqsubseteq U_2\;\;\Rightarrow\;\; {\mathfrak{c}}^{{ \mathfrak{S}}'}(U_1)\sqsubseteq {\mathfrak{c}}^{{ \mathfrak{S}}'}(U_2).\;\;\;\;\;\;\;\;\;\;\;\;\;\;\label{propB4}
\end{eqnarray}
\end{lemma}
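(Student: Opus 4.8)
The statement to prove is that the map $\mathfrak{c}^{\mathfrak{S}'}: \mathcal{P}(\mathfrak{S}') \to \mathcal{P}(\mathfrak{S}')$ defined in \eqref{definitioncompleterealspaceofstates2} is a pre-closure, i.e. that it satisfies the extensivity property \eqref{propB3} and the monotonicity property \eqref{propB4}. Both are purely order-theoretic facts about the pre-order $\sqsubseteq$ on $\mathcal{P}(\mathfrak{S}')$ defined in \eqref{RECDefinitionpreorderPSbar}, so the plan is to unwind the definitions directly; no deep structural input is needed beyond the down-completeness of $\mathfrak{S}'$ (to guarantee the suprema $\bigsqcup^{\mathfrak{S}'} V$ appearing in the definition actually exist when $\widehat{\;V\;}^{\mathfrak{S}'}$ holds) and the existence of $Max(\cdot)$ of such a family, which is handled exactly as in Lemma~\ref{lemmatricky}.

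For extensivity \eqref{propB3}: fix $U \in \mathcal{P}(\mathfrak{S}')$ and any $\sigma \in U$. Then the singleton $V := \{\sigma\}$ satisfies $V \sqsubseteq U$ and trivially $\widehat{\;V\;}^{\mathfrak{S}'}$ (a singleton always has a common upper bound, namely itself), with $\bigsqcup^{\mathfrak{S}'} V = \sigma$. Hence $\sigma$ belongs to the family $\{\,\bigsqcup^{\mathfrak{S}'} V \;\vert\; V \sqsubseteq U \text{ and } \widehat{\;V\;}^{\mathfrak{S}'}\,\}$, so there exists an element of $Max$ of that family lying above $\sigma$ (again because of down-completeness / Zorn's lemma, every element of the family is dominated by a maximal one, exactly as argued in Lemma~\ref{lemmatricky}). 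This gives $\{\sigma\} \sqsubseteq \mathfrak{c}^{\mathfrak{S}'}(U)$ for each $\sigma \in U$, i.e. $U \sqsubseteq \mathfrak{c}^{\mathfrak{S}'}(U)$.

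For monotonicity \eqref{propB4}: suppose $U_1 \sqsubseteq U_2$. Let $\sigma \in \mathfrak{c}^{\mathfrak{S}'}(U_1)$, so $\sigma = \bigsqcup^{\mathfrak{S}'} V$ for some $V \sqsubseteq U_1$ with $\widehat{\;V\;}^{\mathfrak{S}'}$. By transitivity of $\sqsubseteq$ we get $V \sqsubseteq U_2$. Thus $\sigma = \bigsqcup^{\mathfrak{S}'} V$ is a member of the defining family for $\mathfrak{c}^{\mathfrak{S}'}(U_2)$, hence (as above) is dominated by some element of $Max$ of that family, i.e. there exists $\tau \in \mathfrak{c}^{\mathfrak{S}'}(U_2)$ with $\sigma \sqsubseteq_{\mathfrak{S}'} \tau$. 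Since $\sigma$ was an arbitrary element of $\mathfrak{c}^{\mathfrak{S}'}(U_1)$, this shows $\mathfrak{c}^{\mathfrak{S}'}(U_1) \sqsubseteq \mathfrak{c}^{\mathfrak{S}'}(U_2)$.

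The only subtlety — and hence the "main obstacle," though it is a mild one — is the interaction between passing to $Max$ and the pre-order $\sqsubseteq$: an element $\sigma$ of the defining family need not itself lie in $\mathfrak{c}^{\mathfrak{S}'}(U) = Max(\text{family})$, but it is always $\sqsubseteq$-below some maximal element, which is all that is required for the $\sqsubseteq$-inequalities we want. This is precisely the phenomenon already exploited in the proof of Lemma~\ref{lemmatricky}, so I would simply invoke the same Zorn-type argument (every chain-complete-from-below structure has maximal elements above any given element) rather than re-deriving it. Idempotency of $\mathfrak{c}^{\mathfrak{S}'}$ is \emph{not} claimed here (only a pre-closure, not a closure), so no further work is needed.
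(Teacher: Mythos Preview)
Your proposal is correct and follows essentially the same approach as the paper: for extensivity you use singletons $\{\sigma\}$ as witnesses that each $\sigma\in U$ lies in the defining family, and for monotonicity you use transitivity of $\sqsubseteq$ to transfer any witness $V\sqsubseteq U_1$ to $V\sqsubseteq U_2$. Your extra care about why membership in the family implies domination by a maximal element (via Zorn/chain-completeness) is a point the paper leaves implicit, but the argument is otherwise identical.
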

\begin{proof}
The property (\ref{propB3}) is easily deduced from the definition (\ref{definitioncompleterealspaceofstates2}). Indeed, we have, for any $\sigma\in U$, $\{\sigma\}\sqsubseteq U$ and $\widehat{\;\{\sigma\}\;}{}^{{}^{{ \mathfrak{S}}'}}$ and $(\bigsqcup{}^{{}^{{ \mathfrak{S}}'}}\{\sigma\})=\sigma$. As a result, we have $\sigma \in \{\,\bigsqcup{}^{{}^{{ \mathfrak{S}}'}} V\;\vert\; V\sqsubseteq U\;\;\textit{\rm and}\;\; \widehat{\;V\;}{}^{{}^{{ \mathfrak{S}}'}}\,\}$ and then $\{\sigma\}\sqsubseteq {\mathfrak{c}}^{{ \mathfrak{S}}'}(U)$. As a final conclusion, we obtain $U\sqsubseteq {\mathfrak{c}}^{{ \mathfrak{S}}'}(U)$. \\
Concerning the property (\ref{propB4}), we have the following analysis. For any $\sigma\in \{\,\bigsqcup{}^{{}^{{ \mathfrak{S}}'}} V\;\vert\; V\sqsubseteq U_1\;\;\textit{\rm and}\;\; \widehat{\;V\;}{}^{{}^{{ \mathfrak{S}}'}}\,\}$ there exists $V\sqsubseteq U_1$ such that $\sigma=\bigsqcup{}^{{}^{{ \mathfrak{S}}'}} V$. But $U_1\sqsubseteq U_2$ implies $V\sqsubseteq U_2$ and then $\sigma\in \{\,\bigsqcup{}^{{}^{{ \mathfrak{S}}'}} V\;\vert\; V\sqsubseteq U_2\;\;\textit{\rm and}\;\; \widehat{\;V\;}{}^{{}^{{ \mathfrak{S}}'}}\,\}$. As a consequence, we have $\{\sigma\}\sqsubseteq {\mathfrak{c}}^{{ \mathfrak{S}}'}(U_2)$, and then $\{\,\bigsqcup{}^{{}^{{ \mathfrak{S}}'}} V\;\vert\; V\sqsubseteq U_1\;\;\textit{\rm and}\;\; \widehat{\;V\;}{}^{{}^{{ \mathfrak{S}}'}}\,\}\sqsubseteq {\mathfrak{c}}^{{ \mathfrak{S}}'}(U_2)$. As a result, we obtain ${\mathfrak{c}}^{{ \mathfrak{S}}'}(U_1)\sqsubseteq {\mathfrak{c}}^{{ \mathfrak{S}}'}(U_2)$. 
This concludes the proof.
\end{proof}
\begin{remark}
We have then proved that ${\mathfrak{c}}^{{ \mathfrak{S}}'}$ is {\em a pre-closure}. However, for some ${ \mathfrak{S}}'$ it appears that ${\mathfrak{c}}^{{ \mathfrak{S}}'}$ is not a closure. In other words,
\begin{eqnarray}
\exists U\in {\mathcal{P}}({ \mathfrak{S}}')&\vert & {\mathfrak{c}}^{{ \mathfrak{S}}'}\circ {\mathfrak{c}}^{{ \mathfrak{S}}'}(U)\not= {\mathfrak{c}}^{{ \mathfrak{S}}'}(U),\label{propB5}
\end{eqnarray}
Let us consider the following case for the Inf semi-lattice ${ \mathfrak{S}}'$ :
\begin{eqnarray}
&&\begin{tikzpicture}
  \node (a) at (-0.5,1) {$w$};
  \node (b) at (0,2) {$x$};
  \node (c) at (0.5,1) {$z$};
  \node (d) at (-1,0) {$u_{1}$};
  \node (e) at (0,0) {$u_{3}$};
  \node (f) at (1,0) {$u_{2}$};
  \node (g) at (-2,0) {$v_{1}$};
  \node (h) at (2,0) {$v_{2}$};
  \node (i) at (0,1) {$y$};
  \node (min) at (0,-1) {$\bot$};
  \draw (min) -- (d) -- (a) (min) -- (g) (min) -- (h) (d) -- (i) -- (f) 
  (e) -- (min) -- (f) -- (c) (a) -- (g) (c) -- (h)
  (g) -- (b) (h) -- (b);
  \draw[preaction={draw=white, -,line width=4pt}] (a) -- (e) -- (c);
\end{tikzpicture} 
\end{eqnarray}
If we choose $U:=\{u_1,u_2,u_3\}$ we observe that ${\mathfrak{c}}^{{ \mathfrak{S}}'}(U)=\{w,z,y\}$, but we have ${\mathfrak{c}}^{{ \mathfrak{S}}'}\circ {\mathfrak{c}}^{{ \mathfrak{S}}'} (U)=\{w,z,y,x\}$.
\end{remark}

\begin{lemma}
We have
\begin{eqnarray}
\forall U\in {\mathcal{P}}({ \mathfrak{S}}')&\vert & {\mathfrak{c}}^{{ \mathfrak{S}}'}\circ {\mathfrak{c}}^{{ \mathfrak{S}}'}(U)\sqsupseteq {\mathfrak{c}}^{{ \mathfrak{S}}'}(U),\label{propB5bis}
\end{eqnarray}
\end{lemma}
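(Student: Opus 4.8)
The statement is the ``easy half'' of idempotency of $\mathfrak{c}^{{ \mathfrak{S}}'}$: the full equality $\mathfrak{c}^{{ \mathfrak{S}}'}\circ \mathfrak{c}^{{ \mathfrak{S}}'}(U)= \mathfrak{c}^{{ \mathfrak{S}}'}(U)$ may fail (this is precisely the content of the preceding remark and its counterexample), but the inequality $\mathfrak{c}^{{ \mathfrak{S}}'}\circ \mathfrak{c}^{{ \mathfrak{S}}'}(U)\sqsupseteq \mathfrak{c}^{{ \mathfrak{S}}'}(U)$ always holds and is a formal consequence of the pre-closure axioms already established, namely the extensivity property (\ref{propB3}).

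\textbf{Plan of the proof.} Fix an arbitrary $U\in {\mathcal{P}}({ \mathfrak{S}}')$. Apply the extensivity property (\ref{propB3}) not to $U$ itself but to the subset $W:={\mathfrak{c}}^{{ \mathfrak{S}}'}(U)\in {\mathcal{P}}({ \mathfrak{S}}')$. Since (\ref{propB3}) states $W\sqsubseteq {\mathfrak{c}}^{{ \mathfrak{S}}'}(W)$ for every $W\in {\mathcal{P}}({ \mathfrak{S}}')$, substituting $W={\mathfrak{c}}^{{ \mathfrak{S}}'}(U)$ gives directly
\begin{eqnarray}
{\mathfrak{c}}^{{ \mathfrak{S}}'}(U)\;\sqsubseteq\; {\mathfrak{c}}^{{ \mathfrak{S}}'}({\mathfrak{c}}^{{ \mathfrak{S}}'}(U))\;=\;{\mathfrak{c}}^{{ \mathfrak{S}}'}\circ {\mathfrak{c}}^{{ \mathfrak{S}}'}(U),\nonumber
\end{eqnarray}
which is exactly (\ref{propB5bis}). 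Since $U$ was arbitrary, this establishes the claim for all $U\in {\mathcal{P}}({ \mathfrak{S}}')$.

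\textbf{Main obstacle.} There is essentially none here; the argument is a one-line instantiation of (\ref{propB3}). The only point worth flagging is that one must resist the temptation to also derive the reverse inclusion by the same trick: monotonicity (\ref{propB4}) applied to $U\sqsubseteq {\mathfrak{c}}^{{ \mathfrak{S}}'}(U)$ only yields ${\mathfrak{c}}^{{ \mathfrak{S}}'}(U)\sqsubseteq {\mathfrak{c}}^{{ \mathfrak{S}}'}\circ {\mathfrak{c}}^{{ \mathfrak{S}}'}(U)$ again, not the opposite direction, and indeed the opposite direction genuinely fails in general, as witnessed by the Inf semi-lattice and the choice $U=\{u_1,u_2,u_3\}$ in the remark above. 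Hence (\ref{propB5bis}) is the best idempotency-type statement available for the bare pre-closure ${\mathfrak{c}}^{{ \mathfrak{S}}'}$, and the genuine closure $cl^{{ \mathfrak{S}}'}$ will have to be obtained from ${\mathfrak{c}}^{{ \mathfrak{S}}'}$ by a transfinite iteration exploiting the Finite Rank Condition.
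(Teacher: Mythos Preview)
Your proof is correct and matches the paper's own argument, which simply says ``Direct consequence of properties (\ref{propB3}) and (\ref{propB4}).'' You use only (\ref{propB3}) applied to $W=\mathfrak{c}^{\mathfrak{S}'}(U)$, which is already sufficient; the paper's citation of (\ref{propB4}) presumably refers to the alternative derivation you mention (apply (\ref{propB4}) to $U\sqsubseteq \mathfrak{c}^{\mathfrak{S}'}(U)$), but either route is a one-line instantiation and the two are equivalent in spirit.
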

\begin{proof}
Direct consequence of properties (\ref{propB3}) and (\ref{propB4}).
\end{proof}

\begin{definition}
We define the map $cl_c^{{ \mathfrak{S}}'}$ from ${\mathcal{P}}({ \mathfrak{S}}')$ to ${\mathcal{P}}({ \mathfrak{S}}')$ by
\begin{eqnarray}
\forall U\in {\mathcal{P}}({ \mathfrak{S}}'),\;&& cl_c^{{ \mathfrak{S}}'}(U) := \bigsqcup{}_{n\geq 1} ({\mathfrak{c}}^{{ \mathfrak{S}}'})^{\circ n}(U)\label{definitioncompleterealspaceofstates3}
\end{eqnarray}
(we have denoted $({\mathfrak{c}}^{{ \mathfrak{S}}'})^{\circ n}$ the n-th composition by ${\mathfrak{c}}^{{ \mathfrak{S}}'}$).
\end{definition}

\begin{lemma}
$cl_c^{{ \mathfrak{S}}'}$ satisfies
\begin{eqnarray}
\forall U\in {\mathcal{P}}({ \mathfrak{S}}'),\;&& U\sqsubseteq cl_c^{{ \mathfrak{S}}'}(U),\label{propB3c}\\
\forall U_1,U_2\in {\mathcal{P}}({ \mathfrak{S}}'),\;&& U_1\sqsubseteq U_2\;\;\Rightarrow\;\; cl_c^{{ \mathfrak{S}}'}(U_1)\sqsubseteq cl_c^{{ \mathfrak{S}}'}(U_2).\;\;\;\;\;\;\;\;\;\;\;\;\;\;\label{propB4c}
\end{eqnarray}
\end{lemma}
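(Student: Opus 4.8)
The final statement asserts that $cl_c^{{ \mathfrak{S}}'}$ satisfies the first two Kuratowski-type axioms (extensivity and monotonicity). The plan is to derive both directly from the corresponding properties (\ref{propB3c}) and (\ref{propB4c}) — wait, those \emph{are} the statements — so really from (\ref{propB3}) and (\ref{propB4}) of the pre-closure ${\mathfrak{c}}^{{ \mathfrak{S}}'}$ together with the definition (\ref{definitioncompleterealspaceofstates3}) of $cl_c^{{ \mathfrak{S}}'}$ as the supremum $\bigsqcup_{n\geq 1} ({\mathfrak{c}}^{{ \mathfrak{S}}'})^{\circ n}(U)$.

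For extensivity (\ref{propB3c}): since $cl_c^{{ \mathfrak{S}}'}(U) = \bigsqcup_{n\geq 1}({\mathfrak{c}}^{{ \mathfrak{S}}'})^{\circ n}(U)$ is an upper bound (with respect to $\sqsubseteq$) of the family $\{({\mathfrak{c}}^{{ \mathfrak{S}}'})^{\circ n}(U)\}_{n\geq 1}$, and in particular dominates the $n=1$ term ${\mathfrak{c}}^{{ \mathfrak{S}}'}(U)$, we get $cl_c^{{ \mathfrak{S}}'}(U) \sqsupseteq {\mathfrak{c}}^{{ \mathfrak{S}}'}(U) \sqsupseteq U$ by (\ref{propB3}) and transitivity of $\sqsubseteq$. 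I would phrase this cleanly: $U \sqsubseteq {\mathfrak{c}}^{{ \mathfrak{S}}'}(U) \sqsubseteq \bigsqcup_{n\geq 1}({\mathfrak{c}}^{{ \mathfrak{S}}'})^{\circ n}(U) = cl_c^{{ \mathfrak{S}}'}(U)$.

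For monotonicity (\ref{propB4c}): assume $U_1 \sqsubseteq U_2$. By iterating (\ref{propB4}) we obtain $({\mathfrak{c}}^{{ \mathfrak{S}}'})^{\circ n}(U_1) \sqsubseteq ({\mathfrak{c}}^{{ \mathfrak{S}}'})^{\circ n}(U_2)$ for every $n\geq 1$ (a trivial induction on $n$, using that ${\mathfrak{c}}^{{ \mathfrak{S}}'}$ is monotone and the composition of monotone maps is monotone). Then each term $({\mathfrak{c}}^{{ \mathfrak{S}}'})^{\circ n}(U_1)$ is $\sqsubseteq ({\mathfrak{c}}^{{ \mathfrak{S}}'})^{\circ n}(U_2) \sqsubseteq \bigsqcup_{m\geq 1}({\mathfrak{c}}^{{ \mathfrak{S}}'})^{\circ m}(U_2) = cl_c^{{ \mathfrak{S}}'}(U_2)$, so $cl_c^{{ \mathfrak{S}}'}(U_2)$ is an upper bound of the family defining $cl_c^{{ \mathfrak{S}}'}(U_1)$; since $cl_c^{{ \mathfrak{S}}'}(U_1)$ is the \emph{least} such upper bound (it is a supremum in $({\mathcal{P}}({ \mathfrak{S}}'),\sqsubseteq)$, which here is really a supremum computed via the order, as in the $\Omega$/$\Theta^{\overline{{ \mathfrak{S}}}}$ machinery of subsection~\ref{subsectionpreliminaryhidden}), we conclude $cl_c^{{ \mathfrak{S}}'}(U_1) \sqsubseteq cl_c^{{ \mathfrak{S}}'}(U_2)$.

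The only genuine subtlety — the "main obstacle", though it is minor — is making sure the supremum $\bigsqcup_{n\geq 1}$ in (\ref{definitioncompleterealspaceofstates3}) is well-defined and behaves like a least upper bound for $\sqsubseteq$ on the relevant subset of ${\mathcal{P}}({ \mathfrak{S}}')$; this is where the Finite Rank Condition on ${ \mathfrak{S}}'$ (standing assumption of the subsection) is implicitly used, guaranteeing that the ascending chain $U \sqsubseteq {\mathfrak{c}}^{{ \mathfrak{S}}'}(U) \sqsubseteq ({\mathfrak{c}}^{{ \mathfrak{S}}'})^{\circ 2}(U) \sqsubseteq \cdots$ (ascending by (\ref{propB5bis})) admits a supremum. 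Granting that, the proof is purely formal: extensivity is one line and monotonicity is an induction followed by a least-upper-bound argument, both exactly as above. I would write it as a short paragraph invoking (\ref{propB3}), (\ref{propB4}) and the definition of the supremum, with the trivial induction on $n$ spelled out in one sentence.

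\begin{proof}
Property (\ref{propB3c}) is immediate: by (\ref{propB3}) and the fact that $cl_c^{{ \mathfrak{S}}'}(U)=\bigsqcup_{n\geq 1}({\mathfrak{c}}^{{ \mathfrak{S}}'})^{\circ n}(U)$ dominates its $n=1$ term, we have $U\sqsubseteq {\mathfrak{c}}^{{ \mathfrak{S}}'}(U)\sqsubseteq cl_c^{{ \mathfrak{S}}'}(U)$.

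For property (\ref{propB4c}), assume $U_1\sqsubseteq U_2$. An easy induction on $n$, using (\ref{propB4}), shows $({\mathfrak{c}}^{{ \mathfrak{S}}'})^{\circ n}(U_1)\sqsubseteq ({\mathfrak{c}}^{{ \mathfrak{S}}'})^{\circ n}(U_2)$ for all $n\geq 1$. Hence for each $n\geq 1$,
\begin{eqnarray}
({\mathfrak{c}}^{{ \mathfrak{S}}'})^{\circ n}(U_1)\sqsubseteq ({\mathfrak{c}}^{{ \mathfrak{S}}'})^{\circ n}(U_2)\sqsubseteq \bigsqcup{}_{m\geq 1}({\mathfrak{c}}^{{ \mathfrak{S}}'})^{\circ m}(U_2)=cl_c^{{ \mathfrak{S}}'}(U_2),\nonumber
\end{eqnarray}
so $cl_c^{{ \mathfrak{S}}'}(U_2)$ is an upper bound for the family $\{({\mathfrak{c}}^{{ \mathfrak{S}}'})^{\circ n}(U_1)\}_{n\geq 1}$. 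Since $cl_c^{{ \mathfrak{S}}'}(U_1)$ is the supremum of that family, we conclude $cl_c^{{ \mathfrak{S}}'}(U_1)\sqsubseteq cl_c^{{ \mathfrak{S}}'}(U_2)$.
\end{proof}
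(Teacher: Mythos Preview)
Your proof is correct and follows the same approach as the paper, which simply states that (\ref{propB3c}) and (\ref{propB4c}) are direct consequences of (\ref{propB3}) and (\ref{propB4}); you have merely spelled out the iteration and supremum argument that this one-liner leaves implicit. One minor remark: your appeal to the \emph{least} upper bound property of $\bigsqcup_{n\geq 1}$ is slightly more than the paper commits to (since $\sqsubseteq$ on ${\mathcal{P}}({\mathfrak{S}}')$ is only a pre-order), but the conclusion follows just as well by noting directly that any $\sigma$ dominated by some $({\mathfrak{c}}^{{\mathfrak{S}}'})^{\circ n}(U_1)$ is, via $({\mathfrak{c}}^{{\mathfrak{S}}'})^{\circ n}(U_1)\sqsubseteq ({\mathfrak{c}}^{{\mathfrak{S}}'})^{\circ n}(U_2)$, dominated by an element of $cl_c^{{\mathfrak{S}}'}(U_2)$.
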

\begin{proof}
The properties (\ref{propB3c}) (\ref{propB4c}) are direct consequences of the properties (\ref{propB3}) (\ref{propB4}).
\end{proof}

\begin{theorem}
Let us now suppose that ${ \mathfrak{S}}'$ satisfies the {\em Finite Rank Condition} (Definition \ref{FiniteRankCondition}).\\
Then, $cl_c^{{ \mathfrak{S}}'}$ satisfies the idempotency property
\begin{eqnarray}
\forall U\subseteq { \mathfrak{S}}',\;&& cl_c^{{ \mathfrak{S}}'}\circ cl_c^{{ \mathfrak{S}}'}(U)= cl_c^{{ \mathfrak{S}}'}(U).\label{propB5c}
\end{eqnarray}
and the following algebraicity property 
\begin{eqnarray}
\hspace{-0.5cm}\forall U\subseteq { \mathfrak{S}}', \forall \{\sigma\}\sqsubseteq cl_c^{{ \mathfrak{S}}'}(U),\;\exists \;\textit{\rm $V$ finite subset of ${ \mathfrak{S}}'$ with $V\sqsubseteq U$ such that}\; \{\sigma\}\sqsubseteq cl_c^{{ \mathfrak{S}}'}(V).\;\;\;\;\;\;\;\label{algebraicclosure}
\end{eqnarray}
\end{theorem}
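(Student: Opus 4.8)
The plan is to prove the two properties separately, with idempotency (\ref{propB5c}) being the substantive half and algebraicity (\ref{algebraicclosure}) essentially a corollary of the mechanism used to establish idempotency. Throughout, the Finite Rank Condition (Definition \ref{FiniteRankCondition}) is the only extra hypothesis, and it will be invoked precisely to control the suprema $\bigsqcup^{{}^{{ \mathfrak{S}}'}} V$ appearing in the definition of ${\mathfrak{c}}^{{ \mathfrak{S}}'}$.

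For idempotency, one inclusion is free: $cl_c^{{ \mathfrak{S}}'}(U) \sqsubseteq cl_c^{{ \mathfrak{S}}'}\circ cl_c^{{ \mathfrak{S}}'}(U)$ follows from (\ref{propB3c}) applied to $cl_c^{{ \mathfrak{S}}'}(U)$. For the reverse inclusion I would first establish the key lemma that, under the Finite Rank Condition, the iterated pre-closure already stabilizes on finite subsets: namely, for any \emph{finite} $W \sqsubseteq cl_c^{{ \mathfrak{S}}'}(U)$ there is a finite $V \sqsubseteq U$ with $W \sqsubseteq {\mathfrak{c}}^{{ \mathfrak{S}}'}(V)$. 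The argument is: an element $\{\sigma\} \sqsubseteq cl_c^{{ \mathfrak{S}}'}(U) = \bigsqcup_{n\geq 1}({\mathfrak{c}}^{{ \mathfrak{S}}'})^{\circ n}(U)$ means $\sigma \sqsubseteq_{{}_{{ \mathfrak{S}}'}} \tau$ for some $\tau \in ({\mathfrak{c}}^{{ \mathfrak{S}}'})^{\circ n}(U)$ for some finite $n$; unwinding the definition (\ref{definitioncompleterealspaceofstates2}) of ${\mathfrak{c}}^{{ \mathfrak{S}}'}$ one layer at a time, $\tau = \bigsqcup^{{}^{{ \mathfrak{S}}'}} V_1$ with $V_1 \sqsubseteq ({\mathfrak{c}}^{{ \mathfrak{S}}'})^{\circ(n-1)}(U)$, and now the Finite Rank Condition lets us replace $V_1$ by a \emph{finite} subset with the same supremum, each element of which lies below some element of $({\mathfrak{c}}^{{ \mathfrak{S}}'})^{\circ(n-1)}(U)$. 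Iterating this replacement downward through the $n$ layers — using the Finite Rank Condition at each step and the fact that a finite union of finite sets is finite — produces a finite $V \sqsubseteq U$ with $\sigma \sqsubseteq_{{}_{{ \mathfrak{S}}'}} \bigsqcup^{{}^{{ \mathfrak{S}}'}}(\text{something built from } V)$, hence $\{\sigma\} \sqsubseteq ({\mathfrak{c}}^{{ \mathfrak{S}}'})^{\circ n}(V)$. Since on a finite starting set the ascending chain $({\mathfrak{c}}^{{ \mathfrak{S}}'})^{\circ k}(V)$ (which is $\sqsubseteq$-monotone by (\ref{propB5bis})) has all its suprema again computable from finite sub-suprema, one checks it stabilizes, so in fact $\{\sigma\} \sqsubseteq cl_c^{{ \mathfrak{S}}'}(V)$ with $V$ finite. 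That is exactly the algebraicity statement (\ref{algebraicclosure}).

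Granting algebraicity, idempotency is immediate: take any $\{\sigma\} \sqsubseteq cl_c^{{ \mathfrak{S}}'}\circ cl_c^{{ \mathfrak{S}}'}(U)$; by algebraicity applied with $U$ replaced by $cl_c^{{ \mathfrak{S}}'}(U)$, there is a finite $V \sqsubseteq cl_c^{{ \mathfrak{S}}'}(U)$ with $\{\sigma\} \sqsubseteq cl_c^{{ \mathfrak{S}}'}(V)$; by the key lemma above, $V \sqsubseteq {\mathfrak{c}}^{{ \mathfrak{S}}'}(V')$ for some finite $V' \sqsubseteq U$, and since $cl_c^{{ \mathfrak{S}}'}$ is $\sqsubseteq$-monotone (\ref{propB4c}) and absorbs ${\mathfrak{c}}^{{ \mathfrak{S}}'}$ (as ${\mathfrak{c}}^{{ \mathfrak{S}}'}(V') \sqsubseteq cl_c^{{ \mathfrak{S}}'}(V')$ by definition, whence $cl_c^{{ \mathfrak{S}}'}({\mathfrak{c}}^{{ \mathfrak{S}}'}(V')) \sqsubseteq cl_c^{{ \mathfrak{S}}'}(cl_c^{{ \mathfrak{S}}'}(V'))$ and the latter collapses by stabilization on the finite set $V'$), we get $\{\sigma\} \sqsubseteq cl_c^{{ \mathfrak{S}}'}(V') \sqsubseteq cl_c^{{ \mathfrak{S}}'}(U)$. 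As this holds for every singleton below $cl_c^{{ \mathfrak{S}}'}\circ cl_c^{{ \mathfrak{S}}'}(U)$, and since any $I \sqsubseteq J$ iff $\{\sigma\}\sqsubseteq J$ for all $\sigma\in I$, we conclude $cl_c^{{ \mathfrak{S}}'}\circ cl_c^{{ \mathfrak{S}}'}(U) \sqsubseteq cl_c^{{ \mathfrak{S}}'}(U)$, completing the proof.

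The main obstacle is the bookkeeping in the downward induction through the $n$ layers of ${\mathfrak{c}}^{{ \mathfrak{S}}'}$: one must be careful that replacing an infinite witness set $V_k$ by a finite one via the Finite Rank Condition is compatible with the relation $V_k \sqsubseteq ({\mathfrak{c}}^{{ \mathfrak{S}}'})^{\circ(n-k)}(U)$ — i.e., that each element of the new finite set still lies $\sqsubseteq_{{}_{{ \mathfrak{S}}'}}$ below an element of the previous layer, which is guaranteed by the clause "$\forall \sigma\in V,\exists\sigma'\in U,\sigma\sqsubseteq_{{}_{{ \mathfrak{S}}}}\sigma'$" in Definition \ref{FiniteRankCondition} — and that the resulting finite data at the bottom genuinely satisfies $V \sqsubseteq U$. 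A secondary subtlety is the stabilization claim for $({\mathfrak{c}}^{{ \mathfrak{S}}'})^{\circ k}$ on a finite set: it is not that each layer is finite (it need not be), but that each layer's relevant suprema are witnessed by finitely many elements of the previous layer, so the "rank" does not grow, and the ascending $\sqsubseteq$-chain of layers must therefore terminate. Making this termination argument precise — likely by tracking, for a fixed target $\sigma$, a finite certificate that only ever shrinks or stays the same — is where the real care is needed; everything else is a routine unwinding of (\ref{definitioncompleterealspaceofstates2}), (\ref{definitioncompleterealspaceofstates3}), and the monotonicity properties (\ref{propB3c})–(\ref{propB4c}).
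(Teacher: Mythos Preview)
Your overall strategy---establish algebraicity (\ref{algebraicclosure}) first via layer-by-layer unwinding using the Finite Rank Condition, then deduce idempotency (\ref{propB5c})---is the same as the paper's, and your algebraicity argument is essentially correct (the paper states the downward induction through the $n$ layers more tersely, but the mechanism is identical).

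However, your idempotency step introduces an unnecessary and unjustified detour. You reduce to showing $cl_c^{{ \mathfrak{S}}'}({\mathfrak{c}}^{{ \mathfrak{S}}'}(V')) \sqsubseteq cl_c^{{ \mathfrak{S}}'}(V')$ and then appeal to ``stabilization on the finite set $V'$'' to collapse $cl_c^{{ \mathfrak{S}}'}\circ cl_c^{{ \mathfrak{S}}'}(V')$, while admitting you do not have a precise argument for this stabilization. In fact the absorption $cl_c^{{ \mathfrak{S}}'}\circ {\mathfrak{c}}^{{ \mathfrak{S}}'} \sqsubseteq cl_c^{{ \mathfrak{S}}'}$ is a trivial index shift requiring no stabilization: by definition $cl_c^{{ \mathfrak{S}}'}({\mathfrak{c}}^{{ \mathfrak{S}}'}(V')) = \bigsqcup_{n\geq 1}({\mathfrak{c}}^{{ \mathfrak{S}}'})^{\circ(n+1)}(V') \sqsubseteq \bigsqcup_{n\geq 1}({\mathfrak{c}}^{{ \mathfrak{S}}'})^{\circ n}(V') = cl_c^{{ \mathfrak{S}}'}(V')$. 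The paper exploits this directly and more simply: from $\{\sigma\}\sqsubseteq cl_c^{{ \mathfrak{S}}'}(V)$ with $V$ finite and $V\sqsubseteq cl_c^{{ \mathfrak{S}}'}(U)$, finiteness of $V$ yields a single $k$ with $V\sqsubseteq ({\mathfrak{c}}^{{ \mathfrak{S}}'})^{\circ k}(U)$, whence by monotonicity $cl_c^{{ \mathfrak{S}}'}(V)\sqsubseteq cl_c^{{ \mathfrak{S}}'}\bigl(({\mathfrak{c}}^{{ \mathfrak{S}}'})^{\circ k}(U)\bigr)=\bigsqcup_{n\geq 1}({\mathfrak{c}}^{{ \mathfrak{S}}'})^{\circ(n+k)}(U)\sqsubseteq cl_c^{{ \mathfrak{S}}'}(U)$. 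No auxiliary finite $V'\sqsubseteq U$ and no stabilization claim are needed. Note also that your ``key lemma'' as stated ($W\sqsubseteq {\mathfrak{c}}^{{ \mathfrak{S}}'}(V)$ with a single application of ${\mathfrak{c}}^{{ \mathfrak{S}}'}$) is stronger than what your argument actually delivers ($\{\sigma\}\sqsubseteq ({\mathfrak{c}}^{{ \mathfrak{S}}'})^{\circ n}(V)$); fortunately the weaker form is all that is required.
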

\begin{proof}
First of all, by choosing $U$ to be a chain in ${ \mathfrak{S}}'$ (explicitly $U:=\{\,u_i\;\vert\; i\geq 1\,\}$ with $u_i\sqsubseteq_{{}_{{ \mathfrak{S}}'}}u_j$ for $i\leq j$) in (\ref{finiterank}), we note that the Finite Rank Condition ensures that the chain $U$ is stabilizing, i.e.  $\exists n\in \mathbb{N}$ with $u_n=u_k$ for any $k\geq n$ (this means that ${ \mathfrak{S}}'$ satisfies the {\em Ascending Chain Condition}).\\
Secondly we note that, if ${ \mathfrak{S}}'$ satisfies the Finite Rank Condition, we then have the following {\em algebraicity} property for ${\mathfrak{c}}^{{ \mathfrak{S}}'}$ :
\begin{eqnarray}
\hspace{-0.5cm}\forall U\subseteq { \mathfrak{S}}', \forall \{\sigma\}\sqsubseteq {\mathfrak{c}}^{{ \mathfrak{S}}'}(U),\;\exists \;\textit{\rm $V$ finite subset of ${ \mathfrak{S}}'$ with $V\sqsubseteq U$ such that}\; \{\sigma\}\sqsubseteq {\mathfrak{c}}^{{ \mathfrak{S}}'}(V).\;\;\;\;\;\;\;\label{algebraicpreclosure}
\end{eqnarray}
We now intent to deduce a similar property for $cl_c^{{ \mathfrak{S}}'}$. \\
Let us consider $\sigma\in { \mathfrak{S}}'$ such that $\{\sigma\}\sqsubseteq cl_c^{{ \mathfrak{S}}'}(U)$. Using the Ascending Chain Condition deduced from the Finite Rank Condition, and the fact that the sequence $({\mathfrak{c}}^{{ \mathfrak{S}}'})^{\circ n}(U)$ forms a nest for $\sqsubseteq$, we deduce that there must exist an integer $n$ such that $\{\sigma\}\sqsubseteq ({\mathfrak{c}}^{{ \mathfrak{S}}'})^{\circ n}(U)$.  From (\ref{algebraicpreclosure}), we then deduce that there exists $V$ finite subset of ${ \mathfrak{S}}'$ with $V\sqsubseteq U$ such that $\{\sigma\}\sqsubseteq ({\mathfrak{c}}^{{ \mathfrak{S}}'})^{\circ n}(V)$. Therefore, we obtain $\{\sigma\}\sqsubseteq \bigsqcup{}_{n\geq 1} ({\mathfrak{c}}^{{ \mathfrak{S}}'})^{\circ n}(V)=cl_c^{{ \mathfrak{S}}'}(V)$. As a conclusion, we have obtained (\ref{algebraicclosure}).\\
Let us now come back to the proof of property (\ref{propB5c}).\\
Let us consider $\sigma\in { \mathfrak{S}}'$ such that $\{\sigma\}\sqsubseteq cl_c^{{ \mathfrak{S}}'}\circ cl_c^{{ \mathfrak{S}}'}(U)$. Using the already proved algebraicity property (\ref{algebraicclosure}), we deduce that there exists a finite subset of ${ \mathfrak{S}}'$ denoted $V$ such that $V\sqsubseteq cl_c^{{ \mathfrak{S}}'}(U)$ and satisfying $\{\sigma\}\sqsubseteq cl_c^{{ \mathfrak{S}}'}(V)$. Then, since $V$ is finite, there must exist an integer $k$ such that $V\sqsubseteq ({\mathfrak{c}}^{{ \mathfrak{S}}'})^{\circ k}(U)$. Therefore, we have $ cl_c^{{ \mathfrak{S}}'}(V) = \bigsqcup{}_{n\geq 1} ({\mathfrak{c}}^{{ \mathfrak{S}}'})^{\circ n}(V) \sqsubseteq \bigsqcup{}_{n\geq 1} ({\mathfrak{c}}^{{ \mathfrak{S}}'})^{\circ n}(({\mathfrak{c}}^{{ \mathfrak{S}}'})^{\circ k}(U))=\bigsqcup{}_{n\geq 1} ({\mathfrak{c}}^{{ \mathfrak{S}}'})^{\circ (n+k)}(U)\sqsubseteq \bigsqcup{}_{n\geq 1} ({\mathfrak{c}}^{{ \mathfrak{S}}'})^{\circ n}(U)$. As a result, we have obtained $\{\sigma\}\sqsubseteq \bigsqcup{}_{n\geq 1} ({\mathfrak{c}}^{{ \mathfrak{S}}'})^{\circ n}(U)=cl_c^{{ \mathfrak{S}}'}(U)$. %As a final conclusion, we have obtained (\ref{propB5c}). 
We have then obtained the inequality 
\begin{eqnarray}  
\forall U\subseteq { \mathfrak{S}}',\;&& cl_c^{{ \mathfrak{S}}'}\circ cl_c^{{ \mathfrak{S}}'}(U) \sqsubseteq cl_c^{{ \mathfrak{S}}'}(U).\label{propB5cfirst}
\end{eqnarray}
The inequality
\begin{eqnarray}
\forall U\subseteq { \mathfrak{S}}',\;&& cl_c^{{ \mathfrak{S}}'}\circ cl_c^{{ \mathfrak{S}}'}(U) \sqsupseteq cl_c^{{ \mathfrak{S}}'}(U).\label{propB5csecond}
\end{eqnarray}
is a direct consequence of properties (\ref{propB3c}) and (\ref{propB4c}).
\end{proof}
\begin{remark}
We are convinced that the Finite Rank Condition imposed as a prerequisite for the previous theorem could be alleged. This is not the purpose of the present paper to enter into this detail, and we let this point for a future study. 
\end{remark}

\noindent Let us now introduce our second basic definition. 

\begin{definition}\label{definitioncompleterealspaceofstatesQc}
We define the following subsets of ${ \mathcal{P}}({ \mathfrak{S}}')$ :
\begin{eqnarray}
{\mathcal{K}}({ \mathfrak{S}}')&:=&\{\,U\subseteq { \mathfrak{S}}'\smallsetminus \{\bot_{{}_{ \mathfrak{S}'}}\} \;\vert\; \forall \sigma,\sigma'\in U,\;\sigma^\star \not\sqsubseteq_{{}_{{ \mathfrak{S}}'}}\sigma'
\,\}\;\;\;\;\;\;\;\;\;\;\;\;\;\;\;\label{definitioncompleterealspaceofstates1propk}\\
{\mathcal{Q}}_c({ \mathfrak{S}}')&:=&\{\,U\subseteq { \mathfrak{S}}'\smallsetminus \{\bot_{{}_{ \mathfrak{S}'}}\} \;\vert\; cl_c^{{ \mathfrak{S}}'}(U)\in {\mathcal{K}}({ \mathfrak{S}}')
\,\}\;\cup\; \{\bot_{{}_{ \mathfrak{S}'}}\}
\;\;\;\;\;\;\;\;\;\;\;\;\;\;\;\label{definitioncompleterealspaceofstates1prop}
\end{eqnarray}
and
\begin{eqnarray}
&& { \mathfrak{J}}^c_{{{ \mathfrak{S}}'}}:=cl_c^{{{ \mathfrak{S}}'}}({ \mathcal{Q}}_c({ \mathfrak{S}}')).\label{definitioncompleterealspaceofstates1propJc}
\end{eqnarray} 
\end{definition}

\begin{lemma}
${\mathcal{Q}}_c({ \mathfrak{S}}')$ satisfies the property : 
\begin{eqnarray}
\forall K,K'\subseteq { \mathfrak{S}}'\smallsetminus \{\bot_{{}_{ \mathfrak{S}'}}\}, \;\;\;\;(\,K\sqsubseteq K'\;\;\textit{\rm and}\;\;K'\in {\mathcal{Q}}_c({ \mathfrak{S}}')\,) &\Rightarrow & K\in {\mathcal{Q}}_c({ \mathfrak{S}}').\;\;\;\;\;\;\;\;\;\;\label{Qlowersubset}
\end{eqnarray}
\end{lemma}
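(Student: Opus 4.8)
The statement asserts that ${\mathcal{Q}}_c({ \mathfrak{S}}')$ is a lower set in $({ \mathcal{P}}({ \mathfrak{S}}'),\sqsubseteq)$: if $K\sqsubseteq K'$ and $K'\in{\mathcal{Q}}_c({ \mathfrak{S}}')$ then $K\in{\mathcal{Q}}_c({ \mathfrak{S}}')$. By Definition \ref{definitioncompleterealspaceofstatesQc}, membership of a non-bottom subset $U$ in ${\mathcal{Q}}_c({ \mathfrak{S}}')$ means $cl_c^{{ \mathfrak{S}}'}(U)\in{\mathcal{K}}({ \mathfrak{S}}')$, i.e. no two elements $\sigma,\sigma'$ of $cl_c^{{ \mathfrak{S}}'}(U)$ satisfy $\sigma^\star\sqsubseteq_{{}_{{ \mathfrak{S}}'}}\sigma'$. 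So the plan is: first dispose of the trivial cases involving $\bot_{{}_{ \mathfrak{S}'}}$, then reduce everything to the monotonicity of $cl_c^{{ \mathfrak{S}}'}$ (property (\ref{propB4c})) together with the fact that ${\mathcal{K}}({ \mathfrak{S}}')$ is itself downward closed under $\sqsubseteq$.

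First I would handle the degenerate situations: if $K'=\{\bot_{{}_{ \mathfrak{S}'}}\}$ then $K\sqsubseteq K'$ forces every element of $K$ to be $\sqsubseteq_{{}_{{ \mathfrak{S}'}}}\bot_{{}_{ \mathfrak{S}'}}$, hence $K\subseteq\{\bot_{{}_{ \mathfrak{S}'}}\}$, so $K\in{\mathcal{Q}}_c({ \mathfrak{S}}')$ by the explicit adjunction of $\{\bot_{{}_{ \mathfrak{S}'}}\}$ in (\ref{definitioncompleterealspaceofstates1prop}). Otherwise $K'$ is of the form $U'\cup(\text{possibly }\{\bot_{{}_{ \mathfrak{S}'}}\})$ with $cl_c^{{ \mathfrak{S}}'}(U')\in{\mathcal{K}}({ \mathfrak{S}}')$; since adjoining or removing $\bot_{{}_{ \mathfrak{S}'}}$ does not affect the $\sqsubseteq$-relation (as $\bot_{{}_{ \mathfrak{S}'}}$ is below everything) nor the ${\mathcal{K}}$-condition, I may assume $K,K'\subseteq{ \mathfrak{S}}'\smallsetminus\{\bot_{{}_{ \mathfrak{S}'}}\}$ and $cl_c^{{ \mathfrak{S}}'}(K')\in{\mathcal{K}}({ \mathfrak{S}}')$.

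The core step: from $K\sqsubseteq K'$ and monotonicity (\ref{propB4c}) we get $cl_c^{{ \mathfrak{S}}'}(K)\sqsubseteq cl_c^{{ \mathfrak{S}}'}(K')$. Now I claim ${\mathcal{K}}({ \mathfrak{S}}')$ is a lower set for $\sqsubseteq$: suppose $L\sqsubseteq M$ with $M\in{\mathcal{K}}({ \mathfrak{S}}')$, and take $\sigma,\sigma'\in L$; then there are $\tau,\tau'\in M$ with $\sigma\sqsubseteq_{{}_{{ \mathfrak{S}'}}}\tau$ and $\sigma'\sqsubseteq_{{}_{{ \mathfrak{S}'}}}\tau'$. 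If we had $\sigma^\star\sqsubseteq_{{}_{{ \mathfrak{S}'}}}\sigma'$, then applying (\ref{RECorderreversing})/(\ref{propstar2}) to $\sigma\sqsubseteq_{{}_{{ \mathfrak{S}'}}}\tau$ gives $\tau^\star\sqsubseteq_{{}_{{ \mathfrak{S}'}}}\sigma^\star$, hence by transitivity $\tau^\star\sqsubseteq_{{}_{{ \mathfrak{S}'}}}\sigma'\sqsubseteq_{{}_{{ \mathfrak{S}'}}}\tau'$, contradicting $M\in{\mathcal{K}}({ \mathfrak{S}}')$. (One must also note elements of $L$ are non-bottom, which is part of the definition of the subsets considered.) Hence $L\in{\mathcal{K}}({ \mathfrak{S}}')$. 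Applying this with $L=cl_c^{{ \mathfrak{S}}'}(K)$ and $M=cl_c^{{ \mathfrak{S}}'}(K')$ yields $cl_c^{{ \mathfrak{S}}'}(K)\in{\mathcal{K}}({ \mathfrak{S}}')$, i.e. $K\in{\mathcal{Q}}_c({ \mathfrak{S}}')$, completing the proof.

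I do not anticipate a genuine obstacle here; the only mild subtlety is the bookkeeping with $\bot_{{}_{ \mathfrak{S}'}}$ and ensuring that $cl_c^{{ \mathfrak{S}}'}(K)$ stays inside ${ \mathfrak{S}}'\smallsetminus\{\bot_{{}_{ \mathfrak{S}'}}\}$ when $K$ does — which follows because $cl_c^{{ \mathfrak{S}}'}$ is built from $\sqsubseteq$-upper operations (suprema of subsets $V\sqsubseteq U$) and a supremum of non-bottom elements is non-bottom in a lattice generated by its maximal elements. The essential content is simply: $cl_c^{{ \mathfrak{S}}'}$ is monotone and ${\mathcal{K}}({ \mathfrak{S}}')$ is $\sqsubseteq$-downward closed thanks to the order-reversing property of $\star$.
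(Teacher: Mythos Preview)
Your proposal is correct and follows essentially the same approach as the paper: first establish that ${\mathcal{K}}({ \mathfrak{S}}')$ is downward closed under $\sqsubseteq$ (using the order-reversing property of $\star$ to derive a contradiction from $\sigma^\star\sqsubseteq_{{}_{{ \mathfrak{S}}'}}\sigma'$), then combine this with the monotonicity (\ref{propB4c}) of $cl_c^{{ \mathfrak{S}}'}$ to conclude. Your extra bookkeeping around $\bot_{{}_{ \mathfrak{S}'}}$ is unnecessary since the statement already restricts to $K,K'\subseteq{ \mathfrak{S}}'\smallsetminus\{\bot_{{}_{ \mathfrak{S}'}}\}$, but it does no harm.
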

\begin{proof}
First of all, we recall the property : 
\begin{eqnarray}
\forall K,K'\subseteq { \mathfrak{S}}'\smallsetminus \{\bot_{{}_{ \mathfrak{S}'}}\}, \;\;\;\;(\,K\sqsubseteq K'\;\;\textit{\rm and}\;\;K'\in {\mathcal{K}}({ \mathfrak{S}}')\,) &\Rightarrow & K\in {\mathcal{K}}({ \mathfrak{S}}').\;\;\;\;\;\;\;\;\;\;\label{Klowersubset}
\end{eqnarray}
Indeed, let us consider $ K,K'\subseteq { \mathfrak{S}}'\smallsetminus \{\bot_{{}_{ \mathfrak{S}'}}\}$ such that $K\sqsubseteq K'$ and $K'\in {\mathcal{K}}({ \mathfrak{S}}')$.
For any $\sigma,\sigma'\in K$ there exists $\kappa,\kappa'\in K'$ such that $\sigma\sqsubseteq_{{}_{{ \mathfrak{S}'}}}\kappa$ and $\sigma'\sqsubseteq_{{}_{{ \mathfrak{S}'}}}\kappa'$. By assumption we have $\kappa^\star \not\sqsubseteq_{{}_{{ \mathfrak{S}}'}}\kappa'$, and we then deduce $\sigma^\star \not\sqsubseteq_{{}_{{ \mathfrak{S}}'}}\sigma'$. This concludes the proof of (\ref{Klowersubset}).\\
Now, let us consider $K,K'\subseteq { \mathfrak{S}}'\smallsetminus \{\bot_{{}_{ \mathfrak{S}'}}\}$, such that $K\sqsubseteq K'$ and $K'\in {\mathcal{Q}}_c({ \mathfrak{S}}')$. We have, using monotonicity of $cl_c^{{ \mathfrak{S}}'}$, the property $cl_c^{{ \mathfrak{S}}'}(K)\sqsubseteq cl_c^{{ \mathfrak{S}}'}(K')$. And $K'\in {\mathcal{Q}}_c({ \mathfrak{S}}')$ means $cl_c^{{ \mathfrak{S}}'}(K')\in {\mathcal{K}}({ \mathfrak{S}}')$. Now, we use (\ref{Klowersubset}) to deduce that $cl_c^{{ \mathfrak{S}}'}(K)\in {\mathcal{K}}({ \mathfrak{S}}')$, i.e.  $K\in {\mathcal{Q}}_c({ \mathfrak{S}}')$. This concludes the proof.
\end{proof}

\begin{lemma}
\begin{eqnarray}
\forall K\in {\mathcal{Q}}_c({ \mathfrak{S}}'),\;&& cl_c^{{ \mathfrak{S}}'}(K)\in {\mathcal{Q}}_c({ \mathfrak{S}}').\label{propB2}
\end{eqnarray}
\end{lemma}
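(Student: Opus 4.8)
The plan is to reduce the statement to the idempotency of the closure operator $cl_c^{{ \mathfrak{S}}'}$ established in (\ref{propB5c}). First I would recall that, throughout this subsection, ${ \mathfrak{S}}'$ is assumed to satisfy the Finite Rank Condition, so (\ref{propB5c}) is available: $cl_c^{{ \mathfrak{S}}'}\circ cl_c^{{ \mathfrak{S}}'}(U)=cl_c^{{ \mathfrak{S}}'}(U)$ for every $U\subseteq { \mathfrak{S}}'$.

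Next, given $K\in {\mathcal{Q}}_c({ \mathfrak{S}}')$, I would split into two cases according to Definition \ref{definitioncompleterealspaceofstatesQc}. If $K=\{\bot_{{}_{ \mathfrak{S}'}}\}$, one checks directly from (\ref{definitioncompleterealspaceofstates2}) that ${\mathfrak{c}}^{{ \mathfrak{S}}'}(\{\bot_{{}_{ \mathfrak{S}'}}\})=\{\bot_{{}_{ \mathfrak{S}'}}\}$ (the only subsets $V$ with $V\sqsubseteq \{\bot_{{}_{ \mathfrak{S}'}}\}$ are $\varnothing$ and $\{\bot_{{}_{ \mathfrak{S}'}}\}$, both with supremum $\bot_{{}_{ \mathfrak{S}'}}$), hence all iterates coincide and $cl_c^{{ \mathfrak{S}}'}(K)=\{\bot_{{}_{ \mathfrak{S}'}}\}\in {\mathcal{Q}}_c({ \mathfrak{S}}')$ by the very definition of ${\mathcal{Q}}_c({ \mathfrak{S}}')$. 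Otherwise $K\subseteq { \mathfrak{S}}'\smallsetminus \{\bot_{{}_{ \mathfrak{S}'}}\}$ and, by definition of ${\mathcal{Q}}_c({ \mathfrak{S}}')$, $cl_c^{{ \mathfrak{S}}'}(K)\in {\mathcal{K}}({ \mathfrak{S}}')$; in particular $cl_c^{{ \mathfrak{S}}'}(K)\subseteq { \mathfrak{S}}'\smallsetminus \{\bot_{{}_{ \mathfrak{S}'}}\}$ (by the definition of ${\mathcal{K}}({ \mathfrak{S}}')$ in (\ref{definitioncompleterealspaceofstates1propk})), so $cl_c^{{ \mathfrak{S}}'}(K)$ is a legitimate candidate for membership in ${\mathcal{Q}}_c({ \mathfrak{S}}')$.

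Finally, applying (\ref{propB5c}) with $U:=K$ gives $cl_c^{{ \mathfrak{S}}'}\big(cl_c^{{ \mathfrak{S}}'}(K)\big)=cl_c^{{ \mathfrak{S}}'}(K)\in {\mathcal{K}}({ \mathfrak{S}}')$, which is precisely the condition for $cl_c^{{ \mathfrak{S}}'}(K)$ to belong to ${\mathcal{Q}}_c({ \mathfrak{S}}')$. This closes the argument. There is no genuine obstacle here: the only substantial ingredient is the idempotency property (\ref{propB5c}), whose proof (relying on the algebraicity of $cl_c^{{ \mathfrak{S}}'}$ derived from the Finite Rank Condition) has already been carried out above, and the present lemma is essentially its immediate corollary; the only care needed is the bookkeeping of the degenerate case $K=\{\bot_{{}_{ \mathfrak{S}'}}\}$ and of the membership clause $cl_c^{{ \mathfrak{S}}'}(K)\subseteq { \mathfrak{S}}'\smallsetminus \{\bot_{{}_{ \mathfrak{S}'}}\}$.
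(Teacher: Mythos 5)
Your proof is correct and follows essentially the same route as the paper's: both reduce the claim to the idempotency property (\ref{propB5c}) of $cl_c^{{ \mathfrak{S}}'}$, applying it to conclude that $cl_c^{{ \mathfrak{S}}'}\bigl(cl_c^{{ \mathfrak{S}}'}(K)\bigr)=cl_c^{{ \mathfrak{S}}'}(K)\in {\mathcal{K}}({ \mathfrak{S}}')$. Your extra bookkeeping of the degenerate bottom case and of the clause $cl_c^{{ \mathfrak{S}}'}(K)\subseteq { \mathfrak{S}}'\smallsetminus \{\bot_{{}_{ \mathfrak{S}'}}\}$ is a harmless (indeed slightly more careful) elaboration of the same argument.
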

\begin{proof}
Let us consider $K\in {\mathcal{Q}}_c({ \mathfrak{S}}')$.  By definition, we have $cl_c^{{ \mathfrak{S}}'}(K)\in {\mathcal{K}}({ \mathfrak{S}}')$ and then, using the idempotency of $cl_c^{{ \mathfrak{S}}'}$, we have $cl_c^{{ \mathfrak{S}}'}(cl_c^{{ \mathfrak{S}}'}(K))=cl_c^{{ \mathfrak{S}}'}(K)\in {\mathcal{K}}({ \mathfrak{S}}')$. In other words, $cl_c^{{ \mathfrak{S}}'}(K)\in {\mathcal{Q}}_c({ \mathfrak{S}}')$. 
\end{proof}

\begin{lemma}
\begin{eqnarray}
\hspace{-1cm}\forall U\in cl_c^{{{ \mathfrak{S}}'}}({ \mathcal{Q}}_c({ \mathfrak{S}}')),\forall \sigma_1,\sigma_2\in U,&& \widehat{\sigma_1\sigma_2}{}^{{}^{{{ \mathfrak{S}}'}}}\;\;\Rightarrow\;\; \sigma_1=\sigma_2.\label{propB1}
\end{eqnarray}
\end{lemma}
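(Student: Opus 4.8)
The plan is to prove the stronger structural fact that every $U\in cl_c^{{ \mathfrak{S}}'}({ \mathcal{Q}}_c({ \mathfrak{S}}'))$ is an antichain of $({ \mathfrak{S}}',\sqsubseteq_{{}_{{ \mathfrak{S}'}}})$ in which, moreover, no two distinct elements admit a common upper bound in ${ \mathfrak{S}}'$ --- the second half being exactly the assertion. First I would fix $K\in { \mathcal{Q}}_c({ \mathfrak{S}}')$, set $U:=cl_c^{{ \mathfrak{S}}'}(K)$, and dispose of the trivial case $K=\{\bot_{{}_{ \mathfrak{S}'}}\}$ (where $U=\{\bot_{{}_{ \mathfrak{S}'}}\}$), so that in what follows $K\neq\{\bot_{{}_{ \mathfrak{S}'}}\}$.

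The crucial step is to identify $U$, as a set, with the $Max$-set ${\mathfrak{c}}^{{ \mathfrak{S}}'}(U)=Max\{\,\bigsqcup{}^{{}^{{ \mathfrak{S}}'}} V\;\vert\; V\sqsubseteq U\;\;\textit{\rm and}\;\; \widehat{\;V\;}{}^{{}^{{ \mathfrak{S}}'}}\,\}$. By the idempotency property (\ref{propB5c}) --- available because ${ \mathfrak{S}}'$ satisfies the Finite Rank Condition --- we have $cl_c^{{ \mathfrak{S}}'}(U)=U$, and since ${\mathfrak{c}}^{{ \mathfrak{S}}'}(U)\sqsubseteq cl_c^{{ \mathfrak{S}}'}(U)$ while $U\sqsubseteq {\mathfrak{c}}^{{ \mathfrak{S}}'}(U)$ by (\ref{propB3}), the sets $U$ and ${\mathfrak{c}}^{{ \mathfrak{S}}'}(U)$ dominate each other for $\sqsubseteq$. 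Now ${\mathfrak{c}}^{{ \mathfrak{S}}'}(U)$ is a set of $\sqsubseteq_{{}_{{ \mathfrak{S}'}}}$-maximal elements, and every element of $U$ lies --- via a singleton $V=\{\sigma\}$ --- in the collection over which that $Max$ is taken; hence ${\mathfrak{c}}^{{ \mathfrak{S}}'}(U)\sqsubseteq U$ already forces ${\mathfrak{c}}^{{ \mathfrak{S}}'}(U)\subseteq U$ as sets (a maximal element of the collection dominated by an element of $U$ must equal it). For the reverse inclusion I would use the algebraicity property (\ref{algebraicclosure}) together with the Ascending Chain Condition extracted from the Finite Rank Condition in the proof of (\ref{propB5c}): they force the iterated pre-closure $cl_c^{{ \mathfrak{S}}'}(K)$ to be already ${\mathfrak{c}}^{{ \mathfrak{S}}'}$-stable, and two $\sqsubseteq$-equivalent sets of $\sqsubseteq_{{}_{{ \mathfrak{S}'}}}$-maximal elements coincide as sets (if $\sigma\in A$ and $A\sqsubseteq B\sqsubseteq A$, choose $\beta\in B$, $\sigma'\in A$ with $\sigma\sqsubseteq_{{}_{{ \mathfrak{S}'}}}\beta\sqsubseteq_{{}_{{ \mathfrak{S}'}}}\sigma'$; antisymmetry within the antichain $A$ gives $\sigma=\beta\in B$). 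This yields $U={\mathfrak{c}}^{{ \mathfrak{S}}'}(U)$, and in particular $U$ is an antichain of $({ \mathfrak{S}}',\sqsubseteq_{{}_{{ \mathfrak{S}'}}})$.

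The conclusion then follows by an ``absorption'' argument. Given $\sigma_1,\sigma_2\in U$ with $\widehat{\sigma_1\sigma_2}{}^{{}^{{{ \mathfrak{S}}'}}}$, put $V:=\{\sigma_1,\sigma_2\}$: it satisfies $V\sqsubseteq U$ trivially and $\widehat{\;V\;}{}^{{}^{{ \mathfrak{S}}'}}$ by hypothesis, so the join $\sigma_1\sqcup_{{}_{{ \mathfrak{S}'}}}\sigma_2=\bigsqcup{}^{{}^{{ \mathfrak{S}}'}}V$ belongs to $\{\,\bigsqcup{}^{{}^{{ \mathfrak{S}}'}} V'\;\vert\; V'\sqsubseteq U\;\;\textit{\rm and}\;\; \widehat{\;V'\;}{}^{{}^{{ \mathfrak{S}}'}}\,\}$. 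By the Ascending Chain Condition this element sits below a $\sqsubseteq_{{}_{{ \mathfrak{S}'}}}$-maximal element of that collection, i.e. below some $\sigma_3\in {\mathfrak{c}}^{{ \mathfrak{S}}'}(U)=U$. Thus $\sigma_1\sqsubseteq_{{}_{{ \mathfrak{S}'}}}\sigma_3$ and $\sigma_2\sqsubseteq_{{}_{{ \mathfrak{S}'}}}\sigma_3$ with $\sigma_1,\sigma_2,\sigma_3$ all lying in the antichain $U$, whence $\sigma_1=\sigma_3=\sigma_2$, which is the assertion.

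The main obstacle is the identification $U={\mathfrak{c}}^{{ \mathfrak{S}}'}(U)$ at the level of sets rather than merely up to $\sqsubseteq$-equivalence: this is precisely where the Finite Rank Condition is genuinely used, via the stabilization of the iterated pre-closure, in the same spirit as the proofs of (\ref{propB5c}) and (\ref{algebraicclosure}). Once that is in hand, the antichain property and the absorption step are routine. It is worth noting that the star-consistency conditions built into ${ \mathcal{K}}({ \mathfrak{S}}')$ and ${ \mathcal{Q}}_c({ \mathfrak{S}}')$ play no role in this lemma; they intervene only in the verification of the remaining clauses of Definition \ref{definitionrealspaceofstates}.
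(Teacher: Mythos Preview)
Your ``absorption'' step is precisely the paper's argument, but the paper applies it directly to an arbitrary ${\mathfrak{c}}^{{ \mathfrak{S}}'}(W)$ rather than first trying to establish $U={\mathfrak{c}}^{{ \mathfrak{S}}'}(U)$. The paper simply observes that for \emph{any} $W$ and any $\sigma,\sigma'\in{\mathfrak{c}}^{{ \mathfrak{S}}'}(W)=Max\{\bigsqcup{}^{{}^{{ \mathfrak{S}}'}}V : V\sqsubseteq W,\ \widehat{\;V\;}{}^{{}^{{ \mathfrak{S}}'}}\}$ with $\widehat{\sigma\sigma'}{}^{{}^{{ \mathfrak{S}}'}}$, the union $V\cup V'$ of witnessing families is again bounded (by $\sigma\sqcup_{{}_{{ \mathfrak{S}}'}}\sigma'$) and still $\sqsubseteq W$, so $\sigma\sqcup_{{}_{{ \mathfrak{S}}'}}\sigma'$ belongs to the same collection; maximality of both $\sigma$ and $\sigma'$ then forces $\sigma=\sigma\sqcup_{{}_{{ \mathfrak{S}}'}}\sigma'=\sigma'$. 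Since every element of $cl_c^{{ \mathfrak{S}}'}({ \mathcal{Q}}_c({ \mathfrak{S}}'))$ is, by construction, an iterate ${\mathfrak{c}}^{{ \mathfrak{S}}'}(({\mathfrak{c}}^{{ \mathfrak{S}}'})^{\circ(n-1)}(K))$ and hence of the form ${\mathfrak{c}}^{{ \mathfrak{S}}'}(\cdot)$, the lemma follows; no preliminary set-equality is needed.

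Your detour through $U={\mathfrak{c}}^{{ \mathfrak{S}}'}(U)$ has a circularity in the reverse-inclusion step. You invoke the principle ``two $\sqsubseteq$-equivalent sets of $\sqsubseteq_{{}_{{ \mathfrak{S}'}}}$-maximal elements coincide as sets'' with $A=U$ and $B={\mathfrak{c}}^{{ \mathfrak{S}}'}(U)$, but your parenthetical argument requires $A=U$ itself to be an antichain --- and that is exactly what would follow from the equality $U={\mathfrak{c}}^{{ \mathfrak{S}}'}(U)$ you are trying to prove. Concretely: from ${\mathfrak{c}}^{{ \mathfrak{S}}'}(U)\subseteq U$ (which you do establish) and $U\sqsubseteq{\mathfrak{c}}^{{ \mathfrak{S}}'}(U)$ you only obtain that every $\sigma\in U$ lies below some $\tau\in{\mathfrak{c}}^{{ \mathfrak{S}}'}(U)\subseteq U$; concluding $\sigma=\tau$ is precisely the antichain property of $U$ that remains unproved. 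The paper's direct route avoids this entirely by never needing $U$ to equal ${\mathfrak{c}}^{{ \mathfrak{S}}'}(U)$, only to lie in the range of ${\mathfrak{c}}^{{ \mathfrak{S}}'}$.
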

\begin{proof}
It suffices to prove the following simpler property :
\begin{eqnarray}
\hspace{-1cm}\forall U\in {\mathfrak{c}}^{{{ \mathfrak{S}}'}}({ \mathcal{Q}}_c({ \mathfrak{S}}')),\forall \sigma_1,\sigma_2\in U,&& \widehat{\sigma_1\sigma_2}{}^{{}^{{{ \mathfrak{S}}'}}}\;\;\Rightarrow\;\; \sigma_1=\sigma_2.
\end{eqnarray}
Let us consider two elements $\sigma,\sigma'\in {\mathfrak{c}}^{{{ \mathfrak{S}}'}}(U)$. There exists $V,V'\sqsubseteq U$ which are maximal for $\sqsubseteq$ among subsets of $U$ satisfying $\widehat{\;V\;}{}^{{}^{{ \mathfrak{S}}'}}$ and $\widehat{\;V'\;}{}^{{}^{{ \mathfrak{S}}'}}$ and such that $\sigma=\bigsqcup{}^{{}^{{ \mathfrak{S}}'}} V$ and $\sigma'=\bigsqcup{}^{{}^{{ \mathfrak{S}}'}} V'$. If we had $\widehat{\;\sigma\sigma'\;}{}^{{}^{{ \mathfrak{S}}'}}$, we would have $\widehat{\;V\cup V'\;}{}^{{}^{{ \mathfrak{S}}'}}$. This point contradicts the assumption of maximality for $V$ and $V'$. This concludes the proof.
\end{proof}

\begin{lemma}
${\mathcal{Q}}_c({ \mathfrak{S}}')$ satisfies
\begin{eqnarray}
\forall U\subseteq { \mathfrak{S}}',&& \widehat{\;U\;}{}^{{}^{{ \mathfrak{S}'}}}\;\;\Rightarrow\;\; U\in { \mathcal{Q}}_c({ \mathfrak{S}}'),\label{propA1}
\end{eqnarray}
\end{lemma}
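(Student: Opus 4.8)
The plan is to reduce everything to the single element $\tau:=\bigsqcup{}^{{}^{{ \mathfrak{S}}'}}U$, which exists in ${ \mathfrak{S}}'$ because $\widehat{\;U\;}{}^{{}^{{ \mathfrak{S}'}}}$ holds and ${ \mathfrak{S}}'$ is a down complete Inf semi-lattice. The two degenerate cases $U=\varnothing$ and $U\subseteq \{\bot_{{}_{ \mathfrak{S}'}}\}$ (equivalently $\tau=\bot_{{}_{ \mathfrak{S}'}}$) are immediate from the defining clause $\cup\,\{\bot_{{}_{ \mathfrak{S}'}}\}$ of ${\mathcal{Q}}_c({ \mathfrak{S}}')$ in (\ref{definitioncompleterealspaceofstates1prop}), and deleting $\bot_{{}_{ \mathfrak{S}'}}$ from $U$ changes neither $\tau$ nor, up to $\sqsubseteq$-equivalence, the set $U$ itself; so I would assume henceforth that $\bot_{{}_{ \mathfrak{S}'}}\notin U$ and $\tau\not=\bot_{{}_{ \mathfrak{S}'}}$.

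The first computation I would carry out is $cl_c^{{ \mathfrak{S}}'}(U)=\{\tau\}$. The key remark is that ${\mathfrak{c}}^{{ \mathfrak{S}}'}$ is already the identity on $\{\tau\}$: any $V\sqsubseteq U$, and likewise any $V\sqsubseteq \{\tau\}$, consists of elements $\sqsubseteq_{{}_{{ \mathfrak{S}}'}}\tau$, so whenever $\widehat{\;V\;}{}^{{}^{{ \mathfrak{S}'}}}$ holds one has $\bigsqcup{}^{{}^{{ \mathfrak{S}}'}}V\sqsubseteq_{{}_{{ \mathfrak{S}}'}}\tau$; since $V=U$ and $V=\{\tau\}$ are admissible in the supremum defining ${\mathfrak{c}}^{{ \mathfrak{S}}'}$ and both yield $\tau$ itself, the set in the definition (\ref{definitioncompleterealspaceofstates2}) of ${\mathfrak{c}}^{{ \mathfrak{S}}'}(U)$ and of ${\mathfrak{c}}^{{ \mathfrak{S}}'}(\{\tau\})$ has $\tau$ as greatest element, whence ${\mathfrak{c}}^{{ \mathfrak{S}}'}(U)={\mathfrak{c}}^{{ \mathfrak{S}}'}(\{\tau\})=\{\tau\}$. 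Iterating, $({\mathfrak{c}}^{{ \mathfrak{S}}'})^{\circ n}(U)=\{\tau\}$ for every $n\geq 1$, so $cl_c^{{ \mathfrak{S}}'}(U)=\{\tau\}$ by (\ref{definitioncompleterealspaceofstates3}). (Alternatively one could note $U\sqsubseteq \{\tau\}$ and invoke the lower-set property (\ref{Qlowersubset}) to reduce directly to the singleton $\{\tau\}$.)

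Finally I would check that $\{\tau\}\in {\mathcal{K}}({ \mathfrak{S}}')$: since $\tau\not=\bot_{{}_{ \mathfrak{S}'}}$ one has $\{\tau\}\subseteq { \mathfrak{S}}'\smallsetminus \{\bot_{{}_{ \mathfrak{S}'}}\}$, and $\tau^\star\not\sqsubseteq_{{}_{{ \mathfrak{S}}'}}\tau$, for otherwise $\tau$ would be a common upper bound of $\tau^\star$ and $\tau$, contradicting the star-complement axiom (\ref{propstar3}); thus the defining condition (\ref{definitioncompleterealspaceofstates1propk}) of ${\mathcal{K}}({ \mathfrak{S}}')$ is satisfied by the singleton $\{\tau\}$. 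Combining, $cl_c^{{ \mathfrak{S}}'}(U)=\{\tau\}\in {\mathcal{K}}({ \mathfrak{S}}')$, i.e. $U\in {\mathcal{Q}}_c({ \mathfrak{S}}')$ by (\ref{definitioncompleterealspaceofstates1prop}). I do not expect any genuine obstacle here: the only points needing a little care are the bookkeeping of the degenerate cases and observing that the iterated pre-closure collapses at once, which follows from the single fact that ${\mathfrak{c}}^{{ \mathfrak{S}}'}$ fixes $\{\tau\}$.
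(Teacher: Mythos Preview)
Your proof is correct and follows essentially the same approach as the paper's own proof: compute $cl_c^{{ \mathfrak{S}}'}(U)=\{\bigsqcup{}^{{}^{{ \mathfrak{S}}'}}U\}$ and observe that any singleton $\{\tau\}$ with $\tau\not=\bot_{{}_{ \mathfrak{S}'}}$ lies in ${\mathcal{K}}({ \mathfrak{S}}')$. The paper states this in two lines; your version is simply a more careful expansion, including the degenerate cases and the explicit verification that ${\mathfrak{c}}^{{ \mathfrak{S}}'}$ stabilises immediately at $\{\tau\}$.
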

\begin{proof}
Let us consider $U\subseteq { \mathfrak{S}}'$ such that $\widehat{\;U\;}{}^{{}^{{ \mathfrak{S}'}}}$. In order to check   the property (\ref{propA1}),  it suffices to note that $cl_c^{{ \mathfrak{S}}'}(U)=\{\bigsqcup{}^{{}^{{ \mathfrak{S}}'}}U\}$ and to remark that any singleton associated to an element of ${ \mathfrak{S}}'$ is necessarily in ${\mathcal{K}}({ \mathfrak{S}}')$.
\end{proof}

\begin{lemma}
${\mathcal{Q}}_c({ \mathfrak{S}}')$ satisfies
\begin{eqnarray}
\forall J\in { \mathcal{Q}}_c({ \mathfrak{S}}'),\forall \sigma\in { \mathfrak{S}}'\smallsetminus \{\bot_{{}_{ \mathfrak{S}'}}\}, & & \{\sigma,\sigma^\star\}\not\sqsubseteq J,\label{propA2}
\end{eqnarray}
In other words, we have
\begin{eqnarray}
{ \mathcal{Q}}_c({ \mathfrak{S}}') & \subseteq & { \mathcal{K}}({ \mathfrak{S}}').
\end{eqnarray}
\end{lemma}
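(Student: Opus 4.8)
The plan is to derive this lemma almost immediately from the definition (\ref{definitioncompleterealspaceofstates1prop}) of ${ \mathcal{Q}}_c({ \mathfrak{S}}')$, using two facts already at our disposal: the inflationary property (\ref{propB3c}) of the closure $cl_c^{{ \mathfrak{S}}'}$, and the fact---isolated as (\ref{Klowersubset}) inside the proof of (\ref{Qlowersubset})---that ${ \mathcal{K}}({ \mathfrak{S}}')$ is a lower subset of $({ \mathcal{P}}({ \mathfrak{S}}'),\sqsubseteq)$.

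First I would check that, for a subset $J\subseteq { \mathfrak{S}}'\smallsetminus\{\bot_{{}_{ \mathfrak{S}'}}\}$, the condition ``$\forall \sigma\in { \mathfrak{S}}'\smallsetminus\{\bot_{{}_{ \mathfrak{S}'}}\},\ \{\sigma,\sigma^\star\}\not\sqsubseteq J$'' is the same as ``$J\in { \mathcal{K}}({ \mathfrak{S}}')$''. Indeed, if $\{\sigma,\sigma^\star\}\sqsubseteq J$ then there are $\omega,\omega'\in J$ with $\sigma\sqsubseteq_{{}_{{ \mathfrak{S}}'}}\omega$ and $\sigma^\star\sqsubseteq_{{}_{{ \mathfrak{S}}'}}\omega'$, so by the order-reversing property (\ref{propstar2}) we get $\omega^\star\sqsubseteq_{{}_{{ \mathfrak{S}}'}}\sigma^\star\sqsubseteq_{{}_{{ \mathfrak{S}}'}}\omega'$, witnessing $J\notin { \mathcal{K}}({ \mathfrak{S}}')$; conversely, if $\sigma^\star\sqsubseteq_{{}_{{ \mathfrak{S}}'}}\sigma'$ for some $\sigma,\sigma'\in J$, then taking $\omega:=\sigma,\ \omega':=\sigma'$ shows $\{\sigma,\sigma^\star\}\sqsubseteq J$. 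This reduces the statement to the inclusion ${ \mathcal{Q}}_c({ \mathfrak{S}}')\subseteq { \mathcal{K}}({ \mathfrak{S}}')$, the trivial element adjoined in (\ref{definitioncompleterealspaceofstates1prop}) being treated apart: no $\sigma\neq\bot_{{}_{ \mathfrak{S}'}}$ satisfies $\sigma\sqsubseteq_{{}_{{ \mathfrak{S}}'}}\bot_{{}_{ \mathfrak{S}'}}$, so $\{\sigma,\sigma^\star\}\sqsubseteq J$ is impossible for it.

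Then the core step: let $J\in { \mathcal{Q}}_c({ \mathfrak{S}}')$ be one of the non-trivial elements, so $J\subseteq { \mathfrak{S}}'\smallsetminus\{\bot_{{}_{ \mathfrak{S}'}}\}$ and, by definition, $cl_c^{{ \mathfrak{S}}'}(J)\in { \mathcal{K}}({ \mathfrak{S}}')$. By (\ref{propB3c}) we have $J\sqsubseteq cl_c^{{ \mathfrak{S}}'}(J)$, and since ${ \mathcal{K}}({ \mathfrak{S}}')$ is downward closed for $\sqsubseteq$ by (\ref{Klowersubset}), we conclude $J\in { \mathcal{K}}({ \mathfrak{S}}')$. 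I do not expect any genuine obstacle: once (\ref{Klowersubset}) is available, this is a one-line consequence of ``a closure operator is inflationary'' together with ``${ \mathcal{K}}$ is a lower set''; the only minor bookkeeping is the elementary reformulation above via (\ref{propstar2}) and the handling of the adjoined trivial element.
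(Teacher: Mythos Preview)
Your proposal is correct and follows essentially the same route as the paper: both arguments reduce to the observation that $J\sqsubseteq cl_c^{{ \mathfrak{S}}'}(J)\in{\mathcal{K}}({ \mathfrak{S}}')$ and then invoke the downward-closedness of ${\mathcal{K}}({ \mathfrak{S}}')$ established in (\ref{Klowersubset}). You are somewhat more explicit than the paper in spelling out, via (\ref{propstar2}), why the condition ``$\{\sigma,\sigma^\star\}\not\sqsubseteq J$ for all $\sigma$'' coincides with membership in ${\mathcal{K}}({ \mathfrak{S}}')$, and in treating the adjoined trivial element separately, but the underlying idea is identical.
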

\begin{proof}
 Let us consider $J\in { \mathcal{Q}}_c({ \mathfrak{S}}')$ and let us suppose that there exists $\sigma\in { \mathfrak{S}}'\smallsetminus \{\bot_{{}_{ \mathfrak{S}'}}\}$ such that $\{\sigma,\sigma^\star\}\sqsubseteq J$. Then, let us introduce $J':=cl_c^{{ \mathfrak{S}}'}(J)$. Due to the basic properties of the closure $cl_c^{{ \mathfrak{S}}'}$, we know that $J'$ satisfies $J\sqsubseteq J'$. Moreover, from $J\in { \mathcal{Q}}_c({ \mathfrak{S}}')$, we deduce that $J'\in {\mathcal{K}}({ \mathfrak{S}}')$. Now, we use (\ref{Klowersubset}) to conclude that $J\in {\mathcal{K}}({ \mathfrak{S}}')$. This result contradicts the existence of $\sigma\in { \mathfrak{S}}'\smallsetminus \{\bot_{{}_{ \mathfrak{S}'}}\}$ such that $\{\sigma,\sigma^\star\}\sqsubseteq J$. This concludes the proof of (\ref{propA2}). 
\end{proof}

\begin{lemma}\label{lemmasignifySinQ}
\begin{eqnarray}
  S\in {\mathcal{Q}}_c({ \mathfrak{S}}') &\Leftrightarrow & \forall \alpha,\beta\in cl_c^{{{ \mathfrak{S}}'}}(S),\;\alpha^\star\not\sqsubseteq_{{}_{{ \mathfrak{S}}'}}\beta.
\end{eqnarray}
\end{lemma}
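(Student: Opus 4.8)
The plan is to prove Lemma~\ref{lemmasignifySinQ} by directly unwinding Definition~\ref{definitioncompleterealspaceofstatesQc}: the statement is almost a tautological reformulation of the definition of ${\mathcal{Q}}_c({ \mathfrak{S}}')$, and the only point that genuinely calls for an argument is the bookkeeping around the bottom element $\bot_{{}_{ \mathfrak{S}'}}$, on which $\star$ is undefined and which the definition of ${\mathcal{K}}({ \mathfrak{S}}')$ explicitly excludes.

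First I would dispose of the two degenerate cases $S=\varnothing$ and $S=\{\bot_{{}_{ \mathfrak{S}'}}\}$, in which $cl_c^{{ \mathfrak{S}}'}(S)=\{\bot_{{}_{ \mathfrak{S}'}}\}$ and both sides reduce trivially, so that henceforth $S$ is a non-empty subset of ${ \mathfrak{S}}'\smallsetminus\{\bot_{{}_{ \mathfrak{S}'}}\}$ --- the relevant range. For such an $S$, formula~(\ref{definitioncompleterealspaceofstates1prop}) gives $S\in{\mathcal{Q}}_c({ \mathfrak{S}}')$ iff $cl_c^{{ \mathfrak{S}}'}(S)\in{\mathcal{K}}({ \mathfrak{S}}')$, and by~(\ref{definitioncompleterealspaceofstates1propk}) this last condition unpacks into the conjunction of (i) $cl_c^{{ \mathfrak{S}}'}(S)\subseteq { \mathfrak{S}}'\smallsetminus\{\bot_{{}_{ \mathfrak{S}'}}\}$ and (ii) $\alpha^\star\not\sqsubseteq_{{}_{{ \mathfrak{S}}'}}\beta$ for all $\alpha,\beta\in cl_c^{{ \mathfrak{S}}'}(S)$. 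Since (ii) is precisely the right-hand side of the lemma, the whole proof reduces to showing that (i) is automatic for every non-empty $S\subseteq{ \mathfrak{S}}'\smallsetminus\{\bot_{{}_{ \mathfrak{S}'}}\}$; the chain of equivalences then closes.

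The key step for this is that the pre-closure ${\mathfrak{c}}^{{ \mathfrak{S}}'}$ of~(\ref{definitioncompleterealspaceofstates2}) never reintroduces the bottom element. Indeed, if $U\in{\mathcal{P}}({ \mathfrak{S}}')$ contains some $\sigma\neq\bot_{{}_{ \mathfrak{S}'}}$, then $\{\sigma\}\sqsubseteq U$, the value $\widehat{\;\{\sigma\}\;}{}^{{}^{{ \mathfrak{S}}'}}$ is TRUE, and $\bigsqcup{}^{{}^{{ \mathfrak{S}}'}}\{\sigma\}=\sigma$, so $\sigma$ belongs to the set $\{\,\bigsqcup{}^{{}^{{ \mathfrak{S}}'}}V\;\vert\; V\sqsubseteq U,\;\widehat{\;V\;}{}^{{}^{{ \mathfrak{S}}'}}\,\}$; as $\bot_{{}_{ \mathfrak{S}'}}\sqsubset_{{}_{{ \mathfrak{S}}'}}\sigma$, the bottom element is not maximal in that set, whence $\bot_{{}_{ \mathfrak{S}'}}\notin{\mathfrak{c}}^{{ \mathfrak{S}}'}(U)$; moreover ${\mathfrak{c}}^{{ \mathfrak{S}}'}(U)$ again contains an element distinct from $\bot_{{}_{ \mathfrak{S}'}}$, since $U\sqsubseteq{\mathfrak{c}}^{{ \mathfrak{S}}'}(U)$ by~(\ref{propB3}). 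Starting from our non-empty $S$ not containing $\bot_{{}_{ \mathfrak{S}'}}$ and iterating, one gets $\bot_{{}_{ \mathfrak{S}'}}\notin({\mathfrak{c}}^{{ \mathfrak{S}}'})^{\circ n}(S)$ for all $n\geq 1$. As ${ \mathfrak{S}}'$ satisfies the Finite Rank Condition (Definition~\ref{FiniteRankCondition}), it satisfies the Ascending Chain Condition, so the $\sqsubseteq$-increasing nest $({\mathfrak{c}}^{{ \mathfrak{S}}'})^{\circ n}(S)$ stabilises; hence $cl_c^{{ \mathfrak{S}}'}(S)$, which by~(\ref{definitioncompleterealspaceofstates3}) is the supremum of this nest, coincides with its stable value and in particular does not contain $\bot_{{}_{ \mathfrak{S}'}}$. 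This establishes (i), and the equivalence follows.

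I do not expect any real obstacle here: the content is genuinely the definition of ${\mathcal{Q}}_c({ \mathfrak{S}}')$ together with the elementary observation above that ${\mathfrak{c}}^{{ \mathfrak{S}}'}$ never produces $\bot_{{}_{ \mathfrak{S}'}}$. The only delicate points are purely formal --- isolating the degenerate values of $S$, and ensuring that in the right-hand side, where $\alpha,\beta$ range over $cl_c^{{ \mathfrak{S}}'}(S)$, each $\alpha^\star$ is well defined, which is exactly what the previous paragraph guarantees.
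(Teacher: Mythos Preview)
Your proposal is correct and takes essentially the same approach as the paper, which dismisses the lemma in a single line as a ``trivial rewriting of the definition of ${\mathcal{Q}}_c({ \mathfrak{S}}')$.'' You have simply unpacked that rewriting with additional care about the bottom element $\bot_{{}_{ \mathfrak{S}'}}$ (on which $\star$ is undefined), a bookkeeping point the paper silently glosses over; your handling of it is sound, though the claim that the nest $({\mathfrak{c}}^{{ \mathfrak{S}}'})^{\circ n}(S)$ literally stabilises is slightly stronger than what the ACC immediately gives and could be replaced by the observation that each iterate already excludes $\bot_{{}_{ \mathfrak{S}'}}$.
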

\begin{proof}
Trivial rewriting of the definition of ${\mathcal{Q}}_c({ \mathfrak{S}}')$.
\end{proof}

\begin{theorem}
$({\mathfrak{S}}',{ \mathcal{Q}}_c({ \mathfrak{S}}'),cl_c^{{{ \mathfrak{S}}'}},\star)$ satisfies the axiomatic of a real space of states. 
\end{theorem}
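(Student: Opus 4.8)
The plan is to observe that this theorem is purely an assembly statement: every clause of Definition~\ref{definitionrealspaceofstates} has already been established, in isolation, as one of the lemmas of this subsection, so the proof amounts to citing them in the right order and filling the one small remaining gap.

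First I would dispose of the ``ambient data'' clauses, which are nothing but the standing hypotheses of the subsection: ${ \mathfrak{S}}'$ is assumed to be a down complete Inf semi-lattice with bottom element $\bot_{{}_{ \mathfrak{S}'}}$ generated by its maximal elements, and $\star$ is assumed to satisfy (\ref{propstar1}) and (\ref{propstar2}), which are verbatim (\ref{RECinvolutive}) and (\ref{RECorderreversing}). By construction ${ \mathcal{Q}}_c({ \mathfrak{S}}')$ is a subset of ${ \mathcal{P}}({ \mathfrak{S}}')$ carrying the restriction of $\sqsubseteq$; the requirement (\ref{RECQ1}), that $\widehat{\;V\;}{}^{{}^{{ \mathfrak{S}'}}}$ force $V\in { \mathcal{Q}}_c({ \mathfrak{S}}')$, is exactly (\ref{propA1}), and the remaining condition on $\star$, namely (\ref{RECinconsistent}) asserting $\{\sigma,\sigma^\star\}\not\sqsubseteq J$ for every $J\in { \mathcal{Q}}_c({ \mathfrak{S}}')$, is (\ref{propA2}).

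Next I would check the axioms on $cl_c^{{ \mathfrak{S}}'}$. Extensivity (\ref{kuratowski1}) and monotonicity (\ref{kuratowski2}) are (\ref{propB3c}) and (\ref{propB4c}); idempotency (\ref{kuratowski3}) is (\ref{propB5c}). The clause $cl_c^{{ \mathfrak{S}}'}(\{\sigma\})=\{\sigma\}$ is the one direct verification: from the definition (\ref{definitioncompleterealspaceofstates2}), every $V$ with $V\sqsubseteq \{\sigma\}$ has all its elements $\sqsubseteq_{{}_{{ \mathfrak{S}}'}}\sigma$, hence $\bigsqcup{}^{{}^{{ \mathfrak{S}}'}}V\sqsubseteq_{{}_{{ \mathfrak{S}}'}}\sigma$, whereas $V=\{\sigma\}$ is itself admissible and realises $\sigma$, so ${\mathfrak{c}}^{{ \mathfrak{S}}'}(\{\sigma\})=\{\sigma\}$ and then $cl_c^{{ \mathfrak{S}}'}(\{\sigma\})=\{\sigma\}$ by iteration. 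The preservation clause (\ref{basicRECQ}) is (\ref{propB2}), and the incompatibility clause (\ref{RECdefJSbar2}) --- distinct members of an element of $cl_c^{{ \mathfrak{S}}'}({ \mathcal{Q}}_c({ \mathfrak{S}}'))$ have no common upper bound in ${ \mathfrak{S}}'$ --- is (\ref{propB1}). Finally (\ref{RECQ2}) follows by chaining two already-proved facts: if $J'\in cl_c^{{ \mathfrak{S}}'}({ \mathcal{Q}}_c({ \mathfrak{S}}'))$ then $J'=cl_c^{{ \mathfrak{S}}'}(K)$ with $K\in { \mathcal{Q}}_c({ \mathfrak{S}}')$, so $J'\in { \mathcal{Q}}_c({ \mathfrak{S}}')$ by (\ref{propB2}); and then $J\sqsubseteq J'$ together with $J'\in { \mathcal{Q}}_c({ \mathfrak{S}}')$ forces $J\in { \mathcal{Q}}_c({ \mathfrak{S}}')$ by the downward-closure property (\ref{Qlowersubset}).

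The hard part is not any individual step, each being routine once the preceding lemmas are available, but rather keeping honest track of the one genuine hypothesis in play: idempotency (\ref{kuratowski3}), and with it the preservation clause (\ref{basicRECQ}) which depends on it, hold only because ${ \mathfrak{S}}'$ is assumed to satisfy the Finite Rank Condition (Definition~\ref{FiniteRankCondition}); without it ${\mathfrak{c}}^{{ \mathfrak{S}}'}$ is in general merely a pre-closure (cf.\ the counterexample around (\ref{propB5})) and the iterated construction of ${ \mathfrak{J}}^c_{{{ \mathfrak{S}}'}}$ would have to be re-engineered. Granting the Finite Rank Condition, collecting the items above verifies every clause of Definition~\ref{definitionrealspaceofstates}, which is the assertion.
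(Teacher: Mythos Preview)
Your proposal is correct and follows exactly the paper's approach: the paper's proof reads in its entirety ``This result summarizes the previous lemmas,'' and your write-up is simply the fully unrolled version of that sentence, matching each clause of Definition~\ref{definitionrealspaceofstates} to the corresponding lemma (\ref{propA1}), (\ref{propA2}), (\ref{propB2}), (\ref{propB3c})--(\ref{propB5c}), (\ref{propB1}), (\ref{Qlowersubset}), together with the direct check of $cl_c^{{ \mathfrak{S}}'}(\{\sigma\})=\{\sigma\}$. Your closing remark about the dependence on the Finite Rank Condition (needed for idempotency and hence for (\ref{propB2})) is also accurate and matches the standing assumption announced at the start of the subsection.
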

\begin{proof}
This result summarizes the previous lemmas.
\end{proof}

\begin{definition}
The real space of states ${ \mathfrak{J}}^c_{{{ \mathfrak{S}}'}}$ (defined according to Definition \ref{definitioncompleterealspaceofstatesQc}) is called {\em the maximal ontic completion} of $({ \mathfrak{S}}',\star)$. \\
In the following, we will denote $({\mathfrak{S}}',{ \mathcal{Q}}_c({ \mathfrak{S}}'),cl_c^{{{ \mathfrak{S}}'}},\star)$ shortly by $(\!({\mathfrak{S}}',\star)\!)_c$. 
\end{definition}

\begin{definition}
A subset $U$ of ${\mathfrak{S}}'$ is said to be {\em admissible} %(or {\em bounded in the ontic completion ${\mathfrak{S}}$})
 iff $U$ is an element of ${ \mathcal{Q}}_c({ \mathfrak{S}}')$.
\end{definition}

\begin{definition}
We will say that {\em the ontic completion of $({\mathfrak{S}}',\star)$ 
is complete} iff we have
 \begin{eqnarray}
 {\mathcal{Q}}_c({ \mathfrak{S}}') &\supseteq & \widehat{\mathcal{K}}({ \mathfrak{S}}')
\;\;\;\;\;\;\;\;\;\;\;\;\;\;\;\nonumber\\
 \widehat{\mathcal{K}}({ \mathfrak{S}}') &:= & \{\,U\subseteq { \mathfrak{S}}'\smallsetminus \{\bot_{{}_{ \mathfrak{S}'}}\} \;\vert\; \forall \sigma,\sigma'\in U,\;\sigma^\star \not\sqsubseteq_{{}_{{ \mathfrak{S}}'}}\sigma'
\;\textit{\rm and}\;\sigma\not=\sigma'\;\Rightarrow\;\neg\,\widehat{\sigma\,\sigma'}{}^{{}^{{ \mathfrak{S}}'}}
\,\}\;\;\;\;\;\;\;\;\;\;\;\;\;\;\;
\label{propertyadmitscopletionmaximal}
\end{eqnarray}
\end{definition}

It seems {\it a priori} appealing to require the existence of a complete ontic completion. However, it appears (except in degenerate case) that the ontic completion of $(\overline{ \mathfrak{S}},\star)$ cannot be complete. The following theorem clarifies this point.

\begin{theorem}\label{theoremNOcompleteontic}
We will assume the following basic property for ${\mathfrak{S}}'$ : 
\begin{eqnarray}
\sigma,\sigma'\in {{\mathfrak{S}}'}{}^{{}^{pure}},&&(\sigma\sqcap_{{}_{{{\mathfrak{S}}'}}}\sigma')\sqcoversubset_{{}_{{{\mathfrak{S}}'}}}\sigma,\sigma'.\label{assumedcovering}
\end{eqnarray}
Let us now consider two distinct pure states $ \sigma_1,\sigma_2\in {{\mathfrak{S}}'}{}^{{}^{pure}}$ such that $(\sigma_1\sqcap_{{}_{{{\mathfrak{S}}'}}}\sigma_2)\not= \bot_{{}_{{{\mathfrak{S}}'}}}$ and $\sigma_1^\star \not\sqsubseteq_{{}_{{{\mathfrak{S}}'}}}\sigma_2$, we have then $\{\sigma_1,\sigma_2\}\in \widehat{\mathcal{K}}({{\mathfrak{S}}'})$ but $cl_c^{{{\mathfrak{S}}'}}(\{\sigma_1,\sigma_2\})\notin {\mathcal{K}}({{\mathfrak{S}}'})$, i.e. $\{\sigma_1,\sigma_2\}\notin {\mathcal{Q}}_c({{\mathfrak{S}}'})$. \\
As a conclusion, if ${\mathfrak{S}}'$ satisfies the assumed property (\ref{assumedcovering}) and if there exists $\sigma_1,\sigma_2\in {{\mathfrak{S}}'}{}^{{}^{pure}}$ such that $(\sigma_1\sqcap_{{}_{{{\mathfrak{S}}'}}}\sigma_2)\not= \bot_{{}_{{{\mathfrak{S}}'}}}$, then the ontic completion of $({{\mathfrak{S}}'},\star)$ cannot be complete.  
\end{theorem}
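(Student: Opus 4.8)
The plan is to show that the pair $\{\sigma_1,\sigma_2\}$ is a counterexample to completeness by computing its closure $cl_c^{{\mathfrak{S}}'}(\{\sigma_1,\sigma_2\})$ and exhibiting inside it an element $\omega\in{\mathfrak{S}}'$ together with $\omega^\star\in cl_c^{{\mathfrak{S}}'}(\{\sigma_1,\sigma_2\})$ as well, which by Lemma \ref{lemmasignifySinQ} forces $\{\sigma_1,\sigma_2\}\notin{\mathcal{Q}}_c({\mathfrak{S}}')$. First I would verify the easy half: $\{\sigma_1,\sigma_2\}\in\widehat{\mathcal{K}}({\mathfrak{S}}')$. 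By hypothesis $\sigma_1\neq\sigma_2$ are pure with $(\sigma_1\sqcap_{{}_{{\mathfrak{S}}'}}\sigma_2)\neq\bot_{{}_{{\mathfrak{S}}'}}$, so in particular $\neg\widehat{\sigma_1\sigma_2}{}^{{}^{{\mathfrak{S}}'}}$ is \emph{false} — wait, rather: since $\sigma_1,\sigma_2$ are maximal and distinct they have no common upper bound, hence $\neg\widehat{\sigma_1\sigma_2}{}^{{}^{{\mathfrak{S}}'}}$ holds; and $\sigma_1^\star\not\sqsubseteq_{{}_{{\mathfrak{S}}'}}\sigma_2$ is assumed while $\sigma_2^\star\not\sqsubseteq_{{}_{{\mathfrak{S}}'}}\sigma_1$ follows from it by the order-reversing property (\ref{propstar2}) applied to $\sigma_1^\star\not\sqsubseteq\sigma_2$, i.e. if $\sigma_2^\star\sqsubseteq\sigma_1$ then $\sigma_1^\star\sqsubseteq\sigma_2$, contradiction. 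Thus $\{\sigma_1,\sigma_2\}\in\widehat{\mathcal{K}}({\mathfrak{S}}')$.

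Next I would analyze the closure. Set $\mu:=\sigma_1\sqcap_{{}_{{\mathfrak{S}}'}}\sigma_2$; by the assumed covering property (\ref{assumedcovering}), $\mu\sqcoversubset_{{}_{{\mathfrak{S}}'}}\sigma_1$ and $\mu\sqcoversubset_{{}_{{\mathfrak{S}}'}}\sigma_2$. The key point is that $\mu^\star$ should appear in the closure. Indeed, $\mu^\star$ is an upper bound of $\sigma_1^\star$ and $\sigma_2^\star$ (since $\mu\sqsubseteq\sigma_1,\sigma_2$ gives $\sigma_1^\star,\sigma_2^\star\sqsubseteq\mu^\star$ by (\ref{propstar2})), but more relevantly one examines ${\mathfrak{c}}^{{\mathfrak{S}}'}(\{\sigma_1,\sigma_2\})=Max\{\,\bigsqcup{}^{{}^{{\mathfrak{S}}'}}V\;\vert\;V\sqsubseteq\{\sigma_1,\sigma_2\},\widehat{\;V\;}{}^{{}^{{\mathfrak{S}}'}}\,\}$, whose only consistent subsets are singletons $\{\sigma_i\}$ (and subsingletons), so ${\mathfrak{c}}^{{\mathfrak{S}}'}(\{\sigma_1,\sigma_2\})=\{\sigma_1,\sigma_2\}$ up to adding lower elements. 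This means a naive one-step closure does nothing; the genuine mechanism must come from the interaction of $\mu$ with other pure states. The correct route is: pick any pure state $\lambda\sqsupseteq_{{}_{{\mathfrak{S}}'}}\mu^\star$ (which exists since ${\mathfrak{S}}'$ is generated by its maximal elements and $\mu^\star\neq\bot_{{}_{{\mathfrak{S}}'}}$); then one shows that $\{\sigma_1,\sigma_2,\lambda\}$, or rather some superset reached through iterated application of ${\mathfrak{c}}^{{\mathfrak{S}}'}$ starting from $\{\sigma_1,\sigma_2\}$, produces an element and its star. The cleanest argument is to observe that $cl_c^{{\mathfrak{S}}'}(\{\sigma_1,\sigma_2\})\sqsupseteq\{\sigma_1,\sigma_2\}$ and to track the nest $({\mathfrak{c}}^{{\mathfrak{S}}'})^{\circ n}(\{\sigma_1,\sigma_2\})$: at some stage, using the covering relations $\mu\sqcoversubset\sigma_i$ together with the presence of enough pure states above $\mu$, one forces a chain that ``wraps around'' through $\mu$ to reach some $\omega$ with $\{\omega,\omega^\star\}\sqsubseteq cl_c^{{\mathfrak{S}}'}(\{\sigma_1,\sigma_2\})$.

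The main obstacle — and the step I would devote the most care to — is precisely pinning down which element $\omega$ with $\omega^\star$ also in the closure. My expectation is that the intended $\omega$ is $\mu=\sigma_1\sqcap_{{}_{{\mathfrak{S}}'}}\sigma_2$ itself: clearly $\{\mu\}\sqsubseteq\{\sigma_1,\sigma_2\}\sqsubseteq cl_c^{{\mathfrak{S}}'}(\{\sigma_1,\sigma_2\})$, so it suffices to show $\{\mu^\star\}\sqsubseteq cl_c^{{\mathfrak{S}}'}(\{\sigma_1,\sigma_2\})$, i.e. that some element of $cl_c^{{\mathfrak{S}}'}(\{\sigma_1,\sigma_2\})$ dominates $\mu^\star$. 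This is where the hypothesis that ${\mathfrak{S}}'$ is an honest indeterministic (non-simplex) space must enter — in a simplex, by Theorem \ref{completionsimplex}, nothing new appears — and where the covering assumption (\ref{assumedcovering}) is used to guarantee that the join operations in the definition of ${\mathfrak{c}}^{{\mathfrak{S}}'}$ actually climb past the intermediate layer. Concretely I would argue: since $\mu\sqcoversubset\sigma_1$ and $\mu\sqcoversubset\sigma_2$ with $\sigma_1\parallel_{{}_{{\mathfrak{S}}'}}\sigma_2$, the space is not a simplex near $\mu$, so there is a third pure state $\sigma_3\sqsupseteq\mu$ with $\sigma_3\not\sqsupseteq\sigma_1,\sigma_2$; iterating $\bigsqcup$-then-$Max$ with $\sigma_3$ joined against appropriately chosen consistent pieces below $\sigma_1,\sigma_2$ produces, after finitely many steps (finiteness guaranteed by the Finite Rank / Ascending Chain Condition), a join reaching a pure state above $\mu^\star$. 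Once $\{\mu,\mu^\star\}\sqsubseteq cl_c^{{\mathfrak{S}}'}(\{\sigma_1,\sigma_2\})$ is established, Lemma \ref{lemmasignifySinQ} immediately gives $\{\sigma_1,\sigma_2\}\notin{\mathcal{Q}}_c({\mathfrak{S}}')$, hence $\widehat{\mathcal{K}}({\mathfrak{S}}')\not\subseteq{\mathcal{Q}}_c({\mathfrak{S}}')$, so the ontic completion is not complete, which is exactly the stated conclusion.
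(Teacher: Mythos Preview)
Your proposal contains a genuine misreading that derails the argument. When you compute ${\mathfrak{c}}^{{\mathfrak{S}}'}(\{\sigma_1,\sigma_2\})$ you treat the condition $V\sqsubseteq\{\sigma_1,\sigma_2\}$ as ``$V$ is a subset of $\{\sigma_1,\sigma_2\}$'', concluding that the only consistent $V$'s are singletons and hence the one-step closure does nothing. But $\sqsubseteq$ here is the preorder on ${\mathcal P}({\mathfrak{S}}')$ from (\ref{RECDefinitionpreorderPSbar}): $V\sqsubseteq\{\sigma_1,\sigma_2\}$ means every element of $V$ lies below $\sigma_1$ or below $\sigma_2$ in ${\mathfrak{S}}'$. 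This is far more permissive, and it is exactly what makes a \emph{single} application of ${\mathfrak{c}}^{{\mathfrak{S}}'}$ decisive.

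The paper's argument runs as follows. Using $(\sigma_1\sqcap\sigma_2)\neq\bot$ one finds a pure state $\sigma_3$ with $\sigma_1^\star\sqsubseteq\sigma_3$ and $(\sigma_1\sqcap\sigma_2)\not\sqsubseteq\sigma_3$. The covering hypothesis (\ref{assumedcovering}) gives $(\sigma_i\sqcap\sigma_3)\sqcoversubset\sigma_3$ for $i=1,2$; a short case analysis (ruling out $(\sigma_1\sqcap\sigma_3)=(\sigma_2\sqcap\sigma_3)$, which would force $\sigma_1\sqcap\sigma_2\sqsubseteq\sigma_3$) yields $(\sigma_1\sqcap\sigma_3)\sqcup(\sigma_2\sqcap\sigma_3)=\sigma_3$. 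Now take $V:=\{\sigma_1\sqcap\sigma_3,\sigma_2\sqcap\sigma_3\}$: then $V\sqsubseteq\{\sigma_1,\sigma_2\}$ in the preorder sense, $\widehat{V}{}^{{}^{{\mathfrak{S}}'}}$ (bounded by $\sigma_3$), and $\bigsqcup V=\sigma_3$. Since $\sigma_3$ is maximal, $\sigma_3\in{\mathfrak{c}}^{{\mathfrak{S}}'}(\{\sigma_1,\sigma_2\})\sqsubseteq cl_c^{{\mathfrak{S}}'}(\{\sigma_1,\sigma_2\})$. But $\sigma_1$ is also there and $\sigma_1^\star\sqsubseteq\sigma_3$, so the closure fails ${\mathcal K}({\mathfrak{S}}')$ immediately. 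No iteration, no chase for $\mu^\star$, and no appeal to non-simplex structure beyond what is already encoded in the hypotheses. Your proposed mechanism (finding $\mu^\star$ in the closure via an unspecified iterated process) is both aiming at the wrong target and left entirely vague; once you read $\sqsubseteq$ correctly the proof collapses to one step.
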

\begin{proof}
Let us fix $\sigma_1,\sigma_2$ in ${{\mathfrak{S}}'}{}^{{}^{pure}}$ such that $\sigma_1^\star \not\sqsubseteq_{{}_{{{\mathfrak{S}}'}}}\sigma_2$.\\
First of all, we note that, as long as $(\sigma_1\sqcap_{{}_{{{\mathfrak{S}}'}}}\sigma_2)\not= \bot_{{}_{{{\mathfrak{S}}'}}}$ we know that there exists $\sigma_3$ in ${{\mathfrak{S}}'}{}^{{}^{pure}}$ such that $\sigma_1^\star \sqsubseteq_{{}_{{{\mathfrak{S}}'}}}\sigma_3$ and $(\sigma_1\sqcap_{{}_{{{\mathfrak{S}}'}}}\sigma_2)\not\sqsubseteq_{{}_{{{\mathfrak{S}}'}}}\sigma_3$. Indeed, if it was not the case we would have $\bot_{{}_{{{\mathfrak{S}}'}}}\not=(\sigma_1\sqcap_{{}_{{{\mathfrak{S}}'}}}\sigma_2)\sqsubseteq_{{}_{{{\mathfrak{S}}'}}}(\bigsqcap_{\sigma_3\in \underline{\sigma_1^\star}_{{}_{{\mathfrak{S}}'}}}\sigma_3)=\sigma_1^\star$ and then $(\sigma_1\sqcap_{{}_{{{\mathfrak{S}}'}}}\sigma_2)=\sigma_1^\star$ which implies $\sigma_1\sqsupseteq_{{}_{{{\mathfrak{S}}'}}}\sigma_1^\star$ which is false. As a conclusion, 
\begin{eqnarray}
\exists \sigma_3\in {{\mathfrak{S}}'}{}^{{}^{pure}}\;\vert\; \sigma_1^\star \sqsubseteq_{{}_{{{\mathfrak{S}}'}}}\sigma_3\;\textit{\rm and}\; (\sigma_1\sqcap_{{}_{{{\mathfrak{S}}'}}}\sigma_2)\not\sqsubseteq_{{}_{{{\mathfrak{S}}'}}}\sigma_3.
\end{eqnarray}
We have obviously the same result if we replace $\sigma_1$ by $\sigma_2$.\\
Let us now use the assumed property satisfied by ${{\mathfrak{S}}'}$.\\
We have $(\sigma_1\sqcap_{{}_{{{\mathfrak{S}}'}}}\sigma_3)\sqcoversubset_{{}_{{{\mathfrak{S}}'}}}\sigma_3,\sigma_1$ and $(\sigma_2\sqcap_{{}_{{{\mathfrak{S}}'}}}\sigma_3)\sqcoversubset_{{}_{{{\mathfrak{S}}'}}}\sigma_3,\sigma_2$. We have then to distinguish the following two options : (1) $(\sigma_1\sqcap_{{}_{{{\mathfrak{S}}'}}}\sigma_3)\sqcup_{{}_{{{\mathfrak{S}}'}}}(\sigma_2\sqcap_{{}_{{{\mathfrak{S}}'}}}\sigma_3)=(\sigma_1\sqcap_{{}_{{{\mathfrak{S}}'}}}\sigma_3)=(\sigma_2\sqcap_{{}_{{{\mathfrak{S}}'}}}\sigma_3)$, or (2) $(\sigma_1\sqcap_{{}_{{{\mathfrak{S}}'}}}\sigma_3)\sqcup_{{}_{{{\mathfrak{S}}'}}}(\sigma_2\sqcap_{{}_{{{\mathfrak{S}}'}}}\sigma_3)=\sigma_3$.  Let us now show that the first option is excluded.  We have necessarily $\sigma_1\sqcap_{{}_{{{\mathfrak{S}}'}}}\sigma_2\sqcap_{{}_{{{\mathfrak{S}}'}}}\sigma_3 \sqsubseteq_{{}_{{{\mathfrak{S}}'}}}\sigma_1\sqcap_{{}_{{{\mathfrak{S}}'}}}\sigma_2\sqsubset_{{}_{{{\mathfrak{S}}'}}}\sigma_1$. Using $(\sigma_1\sqcap_{{}_{{{\mathfrak{S}}'}}}\sigma_3)=(\sigma_2\sqcap_{{}_{{{\mathfrak{S}}'}}}\sigma_3)$ and $\sigma_1\sqcap_{{}_{{{\mathfrak{S}}'}}}\sigma_3\sqcoversubset_{{}_{{{\mathfrak{S}}'}}}\sigma_1$ we then deduce that $\sigma_1\sqcap_{{}_{{{\mathfrak{S}}'}}}\sigma_2=\sigma_1\sqcap_{{}_{{{\mathfrak{S}}'}}}\sigma_3$, and then $\sigma_1\sqcap_{{}_{{{\mathfrak{S}}'}}}\sigma_2\sqsubseteq_{{}_{{{\mathfrak{S}}'}}}\sigma_3$, which contradicts the assumption on $\sigma_3$.\\
We have then obtained the necessary condition : $(\sigma_1\sqcap_{{}_{{{\mathfrak{S}}'}}}\sigma_3)\sqcup_{{}_{{{\mathfrak{S}}'}}}(\sigma_2\sqcap_{{}_{{{\mathfrak{S}}'}}}\sigma_3)=\sigma_3$.  As a consequence, we deduce
\begin{eqnarray}
\{\sigma_1,\sigma_2,\sigma_3\} &\sqsubseteq & Max \{\,\bigsqcup{}^{{}^{{{\mathfrak{S}}'}}} V\;\vert\; V\sqsubseteq \{\sigma_1,\sigma_2\}\;\;\textit{\rm and}\;\; \widehat{\;V\;}{}^{{}^{{{\mathfrak{S}}'}}}\,\}={\mathfrak{c}}^{{{\mathfrak{S}}'}}(\{\sigma_1,\sigma_2\})\;\;\;\;\;\;\;\;\;\;\;\;\;\;\;\;\;\;\\
&\sqsubseteq & cl_c^{{{\mathfrak{S}}'}}(\{\sigma_1,\sigma_2\})
\end{eqnarray}
But now we recall that ($\sigma_1^\star \sqsubseteq_{{}_{{{\mathfrak{S}}'}}}\sigma_3$ or $\sigma_2^\star \sqsubseteq_{{}_{{{\mathfrak{S}}'}}}\sigma_3$) to conclude that $cl_c^{{{\mathfrak{S}}'}}(\{\sigma_1,\sigma_2\})\notin {\mathcal{K}}({{\mathfrak{S}}'})$. Noting that, by assumption, we have $\{\sigma_1,\sigma_2\}\in \widehat{\mathcal{K}}({{\mathfrak{S}}'})$ we deduce that the ontic completion of $({{\mathfrak{S}}'},\star)$ cannot be complete.
\end{proof}

\subsection{Morphisms and ontic completions}\label{subsectionmorphismsonticcompletions}

Let us consider a morphism $f$ from a real states space $({ \mathfrak{S}}'_{A_1},\star)$ to another real states space $({ \mathfrak{S}}'_{A_2},\star)$.  Both of these spaces of states are supposed to be such that the conditions required for the existence of their ontic completions are satisfied.  We denote ${ \mathfrak{S}}_{A_1}:={ \mathfrak{J}}^c_{{{ \mathfrak{S}}'_{A_1}}}$ and $\overline{ \mathfrak{S}}_{A_1}={ \mathfrak{S}}'_{A_1}$, and analogously ${ \mathfrak{S}}_{A_2}:={ \mathfrak{J}}^c_{{{ \mathfrak{S}}'_{A_2}}}$ and $\overline{ \mathfrak{S}}_{A_2}={ \mathfrak{S}}'_{A_2}$. 

We can define a morphism, denoted $F$ here,  from the states space ${ \mathfrak{S}}_{A_1}$ to the states space ${ \mathfrak{S}}_{A_2}$ by
\begin{eqnarray}
\forall \xi \in { \mathfrak{S}}_{A_1},&&\Theta^{{\overline{ \mathfrak{S}}_{A_2}}}(F(\xi))=cl_c^{{\overline{ \mathfrak{S}}_{A_2}}}(\{ f(\omega_{A_1})\;\vert\; \omega_{A_1}\in \Theta^{{\overline{ \mathfrak{S}}_{A_1}}}(\xi)\;\})\;\;\;\;\;\;\;\;
\end{eqnarray}
such that the restriction of $F$ to $\overline{ \mathfrak{S}}_{A_1}$ is equal to $f$.

\section{Bipartite experiments} \label{sectionfirstremarksonbipartite}

After presenting the requirements for the tensor product of real state spaces, we propose a solution for this tensor product and demonstrate that it is at least suitable for the deterministic case.

\subsection{Building principles for the description of bipartite experiments}\label{subsectioncompoundfirstremarks}
 
 During this subsection, we consider the two states/effects Chu Spaces $({ \mathfrak{S}}_A,{ \mathfrak{E}}_{{ \mathfrak{S}_A}},\epsilon^{{ \mathfrak{S}_A}})$ and $({ \mathfrak{S}}_B,{ \mathfrak{E}}_{{ \mathfrak{S}_B}},\epsilon^{{ \mathfrak{S}_B}})$ equipped respectively with their real structure denoted by $(\overline{ \mathfrak{S}}_A,\star)$ and $(\overline{ \mathfrak{S}}_B,\star)$. In order to shorten certain expressions, we will denote ${ \mathfrak{E}}_A:=\overline{ \mathfrak{E}}_{{ \mathfrak{S}_A}}$ and ${ \mathfrak{E}}_B:=\overline{ \mathfrak{E}}_{{ \mathfrak{S}_B}}$.\\
 
We now begin with a basic set of requirements addressed for the description of bipartite experiments. We will denote by ${ \mathfrak{S}}_{AB}={ \mathfrak{S}}_{A}\boxtimes { \mathfrak{S}}_{B}$ the corresponding space of states.\footnote{Throughout this short axiomatic introduction, we adopt the inadequate notation $\boxtimes$ for the tensor product in order to allow for different candidates for this tensor product. These different candidates will be denoted $\otimes$, $\widetilde{\otimes}$,...}\\

First of all, we will assume that the set ${ \mathfrak{S}}_{A}\boxtimes { \mathfrak{S}}_{B}$ admits mixed bipartite states and a completely mixed state
\begin{eqnarray}
&& \forall \{\,\sigma_{i,AB}\;\vert\; i\in I\,\}\subseteq {{ \mathfrak{S}}_{AB}}, \;\;\;\;\;\; \bigsqcap{}^{{}^{{ \mathfrak{S}}_{AB}}}_{i\in I} \sigma_{i,AB}\;\;\textit{\rm exists in}\;\; { \mathfrak{S}}_{AB},
\;\;\;\;\;\;\label{tensorinfimum}\\
&& \exists \bot_{{}_{{{ \mathfrak{S}}_{AB}}}} \;\textit{\rm bottom element of ${{ \mathfrak{S}}_{AB}}$}.\label{tensorbot}
\end{eqnarray}
We moreover assume that 
\begin{eqnarray}
&&\textit{\rm ${ \mathfrak{S}}_{A}\boxtimes { \mathfrak{S}}_{B}$ admits a real structure denoted $(\overline{{ \mathfrak{S}}_{A}\boxtimes { \mathfrak{S}}_{B}},\star)$ (see Definition \ref{definitionstarstructure}).}\;\;\;\;\;\;\;\;\;\;\;\;\;\;\;\;\;\;
\end{eqnarray}

Secondly, for every pair of real states $\sigma_A\in \overline{ \mathfrak{S}}_{A}$ and $\sigma_B\in \overline{ \mathfrak{S}}_{B}$, prepared independently by Alice and Bob, we will assume that there must exist a unique associated bipartite real state in ${{ \mathfrak{S}}_{A}\boxtimes { \mathfrak{S}}_{B}}$.  More explicitly,  we will assume that there are maps
$\iota^{{ \mathfrak{S}}_{A B}}$ which describe the inclusion of 'pure tensors' in ${ \mathfrak{S}}_{A}\boxtimes { \mathfrak{S}}_{B}$ :
\begin{eqnarray}
\begin{array}{rcrcl}
\iota^{{ \mathfrak{S}}_{A B}} &: & \overline{{ \mathfrak{S}}_{A}}\times  \overline{{ \mathfrak{S}}_{B}} &\longrightarrow & {{ \mathfrak{S}}_{A}\boxtimes { \mathfrak{S}}_{B}}\\
& & (\sigma_A,\sigma_B) & \mapsto & \sigma_A \boxtimes \sigma_B.\label{inclusionpuretensors}
\end{array}
\end{eqnarray}

Thirdly,  for every bipartite real states $\sigma_{AB},\sigma'_{AB} \in \overline{{ \mathfrak{S}}}_{AB}$ such that $\sigma_{AB}\not= \sigma'_{AB}$, we will assume that there must exist real effects ${\mathfrak{l}}_A\in { \mathfrak{E}}_{A}$ and ${\mathfrak{l}}_B\in { \mathfrak{E}}_{B}$ such that when Alice and Bob prepare $\sigma_{AB}$ and apply real effects ${\mathfrak{l}}_A$ and ${\mathfrak{l}}_B$ respectively,  the resulting determination (fixed as the product of the separate determinations) is different from the experiment where Alice and Bob prepare $\sigma'_{AB}$ and apply ${\mathfrak{l}}_A$ and ${\mathfrak{l}}_B$ respectively. As a summary, applying real effects locally is sufficient to distinguish all of the states in ${{ \mathfrak{S}}_{AB}}$ (this principle is called "tomographic locality"), i.e.
\begin{eqnarray}
&&\hspace{-1cm}\forall \; \left(\bigsqcap{}^{{}^{{{{{ \mathfrak{S}}_A}\boxtimes {{ \mathfrak{S}}_{B}}}}}}_{i\in I} \sigma_{i,A}\boxtimes \sigma_{i,B}\right) \in \overline{{{ \mathfrak{S}}_A}\boxtimes {{ \mathfrak{S}}_{B}}},\; \forall \; \left(\bigsqcap{}^{{}^{{{{{ \mathfrak{S}}_A}\boxtimes {{ \mathfrak{S}}_{B}}}}}}_{j\in J} \sigma'_{j,A}\boxtimes \sigma'_{j,B}\right) \in \overline{{{ \mathfrak{S}}_A}\boxtimes {{ \mathfrak{S}}_{B}}},\nonumber\\
&&\hspace{-1cm}\left(\forall {\mathfrak{l}}_A\in { \mathfrak{E}}_{A},{\mathfrak{l}}_B\in { \mathfrak{E}}_{B},\;\; \bigwedge{}_{i\in I}{\epsilon}\,{}^{{ \mathfrak{S}}_A}_{{\mathfrak{l}}_A} (\sigma_{i,A})\bullet {\epsilon}\,{}^{{ \mathfrak{S}}_B}_{{\mathfrak{l}}_B} (\sigma_{i,B})=\bigwedge{}_{j\in J}{\epsilon}\,{}^{{ \mathfrak{S}}_A}_{{\mathfrak{l}}_A} (\sigma'_{j,A})\bullet {\epsilon}\,{}^{{ \mathfrak{S}}_B}_{{\mathfrak{l}}_B} (\sigma'_{j,B}) \right)\nonumber\\
&&\hspace{4cm} \;\Rightarrow\; (\,\bigsqcap{}^{{}^{{{{{ \mathfrak{S}}_A}\boxtimes {{ \mathfrak{S}}_{B}}}}}}_{i\in I} \sigma_{i,A}\boxtimes \sigma_{i,B} = \bigsqcap{}^{{}^{{{{{ \mathfrak{S}}_A}\boxtimes {{ \mathfrak{S}}_{B}}}}}}_{j\in J} \sigma'_{j,A}\boxtimes \sigma'_{j,B}\,).\;\;\;\;\;\;\;\;\;\;\;\label{tensorseparated}
\end{eqnarray}

Fourthly, we impose that if Alice (or Bob) prepares a mixture of real states, then this results in a mixture of the respective bipartite real states.  More explicitly, we require, for any $\{\,\sigma_{i,A}\;\vert\; i\in I\,\}\subseteq \overline{{ \mathfrak{S}}_{A}}$, $\{\,\sigma_{j,B}\;\vert\; j\in J\,\}\subseteq \overline{{ \mathfrak{S}}_{B}}$, $\sigma_A\in \overline{{ \mathfrak{S}}_{A}}$ and $\sigma_B\in \overline{{ \mathfrak{S}}_{B}}$, the two bimorphic properties for the tensor product :
\begin{eqnarray}
&& (\bigsqcap{}^{{}^{{{ \mathfrak{S}}_{A}}}}_{i\in I}\,\sigma_{i,A})\boxtimes \sigma_B =  \bigsqcap{}^{{}^{{{{ \mathfrak{S}}_{AB}}}}}_{i\in I} (\sigma_{i,A}\boxtimes \sigma_B),\label{pitensor=tensorpi1pre}\\
&& \sigma_A \boxtimes  (\bigsqcap{}^{{}^{{{ \mathfrak{S}}_{B}}}}_{i\in I}\,\sigma_{i,B}) =  \bigsqcap{}^{{}^{{{{ \mathfrak{S}}_{AB}}}}}_{i\in I} (\sigma_A \boxtimes \sigma_{i,B}).\label{pitensor=tensorpi2pre}
\end{eqnarray}
%Note that these last bimorphic properties are derived from the tomographic locality principle. Indeed, using the homomorphic property for ${\epsilon}\,{}^{{ \mathfrak{S}}_A}$ and the distributivity property between $\bullet$ and $\wedge$ in ${\mathfrak{B}}$, we deduce
%\begin{eqnarray}
%{\epsilon}\,{}^{{ \mathfrak{S}}_A}_{{\mathfrak{l}}_A} (\bigsqcap{}^{{}^{\overline{{ \mathfrak{S}}_{A}}}}_{i\in I}\sigma_{i,A}) \bullet {\epsilon}\,{}^{{ \mathfrak{S}}_B}_{{\mathfrak{l}}_B} (\sigma_B) 
%&=&
%(\bigwedge{}_{i\in I}\;{\epsilon}\,{}^{{ \mathfrak{S}}_A}_{{\mathfrak{l}}_A} (\sigma_{i,A})) \bullet {\epsilon}\,{}^{{ \mathfrak{S}}_B}_{{\mathfrak{l}}_B} (\sigma_B) \nonumber\\
%&=&
%\bigwedge{}_{i\in I}\;(\,{\epsilon}\,{}^{{ \mathfrak{S}}_A}_{{\mathfrak{l}}_A} (\sigma_{i,A})\bullet {\epsilon}\,{}^{{ \mathfrak{S}}_B}_{{\mathfrak{l}}_B} (\sigma_B)),\label{demobifilterproperty}
%\end{eqnarray}
%and then, using property (\ref{tensorseparated}), we obtain the property (\ref{pitensor=tensorpi1pre}). We obtain the property (\ref{pitensor=tensorpi2pre}) along the same lines of proof.  \\

Endly, we will assume the existence of the two partial traces
\begin{eqnarray}
\left\{
\begin{array}{l}
\textit{\rm there exists an homomorphism denoted $\zeta^{{ \mathfrak{S}}_{A}{ \mathfrak{S}}_{B}}_{(1)}$ from $\overline{{ \mathfrak{S}}_{A}\boxtimes { \mathfrak{S}}_{B}}$ to $\overline{{ \mathfrak{S}}_{A}}$}\\
\textit{\rm there exists an homomorphism denoted $\zeta^{{ \mathfrak{S}}_{A}{ \mathfrak{S}}_{B}}_{(2)}$ from $\overline{{ \mathfrak{S}}_{A}\boxtimes { \mathfrak{S}}_{B}}$ to $\overline{{ \mathfrak{S}}_{B}}$}
\end{array}
\right.\label{requirementpartialtraces}
\end{eqnarray}

\subsection{The minimal tensor product}\label{subsectionbasicnotionstensorproduct}

We now recall some basic elements about the construction of the minimal tensor product of ${ \mathfrak{S}}_{A} $ and ${ \mathfrak{S}}_{B}$ \cite{Buffenoir2023}. The two states/effects Chu Spaces $({ \mathfrak{S}}_A,{ \mathfrak{E}}_{{ \mathfrak{S}_A}},\epsilon^{{ \mathfrak{S}_A}})$ and $({ \mathfrak{S}}_B,{ \mathfrak{E}}_{{ \mathfrak{S}_B}},\epsilon^{{ \mathfrak{S}_B}})$ are equipped respectively with their real structure denoted by $(\overline{ \mathfrak{S}}_A,\star)$ and $(\overline{ \mathfrak{S}}_B,\star)$. Here again, we will denote ${ \mathfrak{E}}_A:=\overline{ \mathfrak{E}}_{{ \mathfrak{S}_A}}$ and ${ \mathfrak{E}}_B:=\overline{ \mathfrak{E}}_{{ \mathfrak{S}_B}}$.

\begin{definition}\label{definnubullet}
The set ${ \mathcal{P}}(\overline{{ \mathfrak{S}}_{A}} \times \overline{{ \mathfrak{S}}_{B}})$ is equipped with the Inf semi-lattice structure $\cup$ and with the following maps defined for any $ {\mathfrak{l}}_{A}\in { \mathfrak{E}}_{A}$ and ${\mathfrak{l}}_{B}\in { \mathfrak{E}}_{B}$,
\begin{eqnarray}
\hspace{-2cm} \begin{array}{rcrcl}
 & & \nu\,{}^{\overline{{ \mathfrak{S}}_{A}}\overline{{ \mathfrak{S}}_{B}}}_{{\mathfrak{l}}_A,{\mathfrak{l}}_B}: { \mathcal{P}}(\overline{{ \mathfrak{S}}_{A}} \times \overline{{ \mathfrak{S}}_{B}}) & \longrightarrow & { \mathfrak{B}}\\
& & \{\,(\sigma_{i,A},\sigma_{i,B})\;\vert\; i\in I\,\} & \mapsto &  \bigwedge{}_{i\in I}\; {\epsilon}\,{}^{{ \mathfrak{S}}_{A}}_{{ \mathfrak{l}}_A}(\sigma_{i,A}) \bullet {\epsilon}\,{}^{{ \mathfrak{S}}_{B}}_{{ \mathfrak{l}}_B}(\sigma_{i,B}).
\end{array}
\end{eqnarray}
\end{definition}

\begin{definition}
${ \mathcal{P}}(\overline{{ \mathfrak{S}}_{A}} \times \overline{{ \mathfrak{S}}_{B}})$ is equipped with a congruence relation defined between any two elements $u_{AB}$ and $u'_{AB}$ of ${ \mathcal{P}}(\overline{{ \mathfrak{S}}_{A}} \times \overline{{ \mathfrak{S}}_{B}})$ by
\begin{eqnarray}
(\,u_{AB} \approx u'_{AB}\,) & :\Leftrightarrow &
(\,\forall {\mathfrak{l}}_A\in { \mathfrak{E}}_{A},\forall {\mathfrak{l}}_B\in { \mathfrak{E}}_{B},\;\;
\nu\,{}^{\overline{{ \mathfrak{S}}_{A}}\overline{{ \mathfrak{S}}_{B}}}_{{\mathfrak{l}}_A,{\mathfrak{l}}_B} (u_{AB})=\nu\,{}^{\overline{{ \mathfrak{S}}_{A}}\overline{{ \mathfrak{S}}_{B}}}_{{\mathfrak{l}}_A,{\mathfrak{l}}_B} (u'_{AB})\,).\;\;\;\;\;\;\;\;\;\;\;\;
\end{eqnarray}
\end{definition}

\begin{definition}
The space ${{\overline{S}}}_{AB}=\overline{{ \mathfrak{S}}_{A}} \widetilde{\otimes} \overline{{ \mathfrak{S}}_{B}}$ is built as the quotient of ${ \mathcal{P}}(\overline{{ \mathfrak{S}}_{A}} \times \overline{{ \mathfrak{S}}_{B}})$ under the congruence relation $\approx$.  
\begin{eqnarray}
\forall \sigma_{AB}\in { \mathcal{P}}(\overline{{ \mathfrak{S}}_{A}} \times \overline{{ \mathfrak{S}}_{B}}), && \widetilde{\sigma_{AB}}:=\{\, u_{AB}\;\vert\; \sigma_{AB}\approx u_{AB}\,\}.
\end{eqnarray}
The map $\nu\,{}^{\overline{{ \mathfrak{S}}_{A}}\overline{{ \mathfrak{S}}_{B}}}_{{\mathfrak{l}}_A,{\mathfrak{l}}_B}$ will be abusively defined as a map from ${{\overline{S}}}_{AB}$ to ${ \mathfrak{B}}$ by $\nu\,{}^{\overline{{ \mathfrak{S}}_{A}}\overline{{ \mathfrak{S}}_{B}}}_{{\mathfrak{l}}_A,{\mathfrak{l}}_B}(\widetilde{\sigma_{AB}}):= \nu\,{}^{\overline{{ \mathfrak{S}}_{A}}\overline{{ \mathfrak{S}}_{B}}}_{{\mathfrak{l}}_A,{\mathfrak{l}}_B}({\sigma_{AB}})$ for any $\sigma_{AB}$ in ${ \mathcal{P}}(\overline{{ \mathfrak{S}}_{A}} \times \overline{{ \mathfrak{S}}_{B}})$.\\
The element $\widetilde{u}\in {{\overline{S}}}_{AB}$ associated to the element $u:=\{\, (\sigma_{i,A},\sigma_{i,B})\;\vert\; i\in I\,\}\in { \mathcal{P}}(\overline{{ \mathfrak{S}}_{A}} \times \overline{{ \mathfrak{S}}_{B}})$ will be denoted $\bigsqcap{}^{{}^{{{\overline{S}}}_{AB}}}_{i\in I} \sigma_{i,A}\widetilde{\otimes}\sigma_{i,B}$.
\end{definition}

\begin{lemma}\label{lemmainclusiontensor}
There exists a map denoted $\iota^{{{\overline{S}}}_{AB}}$ from $\overline{{ \mathfrak{S}}_{A}} {\times} \overline{{ \mathfrak{S}}_{B}}$ to ${{\overline{S}}}_{AB}$. We have explicitly
\begin{eqnarray}
\iota^{{{\overline{S}}}_{AB}}(\sigma_A,\sigma_B) &:=& \widetilde{(\sigma_A,\sigma_B)}.
\end{eqnarray}
We will adopt the basic notation $\sigma_A\widetilde{\otimes}\sigma_B:=\iota^{{{\overline{S}}}_{AB}}(\sigma_A,\sigma_B)$
\end{lemma}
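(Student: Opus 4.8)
The plan is to obtain $\iota^{{{\overline{S}}}_{AB}}$ as the composite of two completely elementary maps, so that almost nothing needs to be verified. First I would record the canonical ``singleton'' map
\begin{eqnarray}
j: \overline{{ \mathfrak{S}}_{A}} \times \overline{{ \mathfrak{S}}_{B}} &\longrightarrow& { \mathcal{P}}(\overline{{ \mathfrak{S}}_{A}} \times \overline{{ \mathfrak{S}}_{B}}),\qquad (\sigma_A,\sigma_B) \longmapsto \{\,(\sigma_A,\sigma_B)\,\}.
\end{eqnarray}
Second, by the very construction of ${{\overline{S}}}_{AB} = \overline{{ \mathfrak{S}}_{A}} \widetilde{\otimes} \overline{{ \mathfrak{S}}_{B}}$ as the quotient of ${ \mathcal{P}}(\overline{{ \mathfrak{S}}_{A}} \times \overline{{ \mathfrak{S}}_{B}})$ by the congruence $\approx$, there is the quotient projection $q: { \mathcal{P}}(\overline{{ \mathfrak{S}}_{A}} \times \overline{{ \mathfrak{S}}_{B}}) \to {{\overline{S}}}_{AB}$, $u_{AB} \mapsto \widetilde{u_{AB}}$. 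Setting $\iota^{{{\overline{S}}}_{AB}} := q \circ j$ then yields $\iota^{{{\overline{S}}}_{AB}}(\sigma_A,\sigma_B) = \widetilde{\{(\sigma_A,\sigma_B)\}}$, which, under the harmless identification of the singleton $\{(\sigma_A,\sigma_B)\}$ with the pair $(\sigma_A,\sigma_B)$, is exactly the announced value $\widetilde{(\sigma_A,\sigma_B)}$. Well-definedness is automatic, since $j$ and $q$ are both well defined (the former obviously, the latter by definition of the quotient).

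To make the statement useful afterwards I would also unwind what the congruence does on singletons: for any ${ \mathfrak{l}}_A\in { \mathfrak{E}}_A$ and ${ \mathfrak{l}}_B\in { \mathfrak{E}}_B$, Definition \ref{definnubullet} applied to the one-element family $\{(\sigma_A,\sigma_B)\}$ gives
\begin{eqnarray}
\nu\,{}^{\overline{{ \mathfrak{S}}_{A}}\overline{{ \mathfrak{S}}_{B}}}_{{\mathfrak{l}}_A,{\mathfrak{l}}_B}\big(\iota^{{{\overline{S}}}_{AB}}(\sigma_A,\sigma_B)\big) &=& {\epsilon}\,{}^{{ \mathfrak{S}}_{A}}_{{ \mathfrak{l}}_A}(\sigma_A) \bullet {\epsilon}\,{}^{{ \mathfrak{S}}_{B}}_{{ \mathfrak{l}}_B}(\sigma_B),
\end{eqnarray}
so that $\sigma_A\widetilde{\otimes}\sigma_B$ is precisely the class whose local statistics factorise as the $\bullet$-product of the separate determinations of $\sigma_A$ and $\sigma_B$. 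This is the property one actually wants to carry into the discussion of the bimorphic identities (\ref{pitensor=tensorpi1pre})--(\ref{pitensor=tensorpi2pre}).

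There is no genuine obstacle in this lemma: the single subtlety is purely notational, namely the identification between an element $(\sigma_A,\sigma_B)$ of $\overline{{ \mathfrak{S}}_{A}} \times \overline{{ \mathfrak{S}}_{B}}$ and the corresponding singleton in ${ \mathcal{P}}(\overline{{ \mathfrak{S}}_{A}} \times \overline{{ \mathfrak{S}}_{B}})$, which the notation $\widetilde{(\sigma_A,\sigma_B)}$ already presupposes; once that convention is fixed the map and its formula are immediate.
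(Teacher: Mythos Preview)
Your proposal is correct and matches the paper's treatment: the paper states this lemma without proof, since it is essentially a definition built from the singleton inclusion followed by the quotient projection, exactly as you spell out. Your additional remark computing $\nu\,{}^{\overline{{ \mathfrak{S}}_{A}}\overline{{ \mathfrak{S}}_{B}}}_{{\mathfrak{l}}_A,{\mathfrak{l}}_B}$ on pure tensors is a useful unpacking but not part of the lemma itself.
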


\begin{lemma}\label{lemmaseparatedminimal}
By construction, we have
\begin{eqnarray}
&&\hspace{-1.8cm}\forall \; \left(\bigsqcap{}^{{}^{{{\overline{S}}}_{AB}}}_{i\in I} \sigma_{i,A}\widetilde{\otimes} \sigma_{i,B}\right) \in {{\overline{S}}}_{AB},\; \forall \; \left(\bigsqcap{}^{{}^{{{\overline{S}}}_{AB}}}_{j\in J} \sigma'_{j,A}\widetilde{\otimes} \sigma'_{j,B}\right) \in {{\overline{S}}}_{AB},\nonumber\\
&&\hspace{-1.8cm}\left(\forall {\mathfrak{l}}_A\in { \mathfrak{E}}_{A},{\mathfrak{l}}_B\in { \mathfrak{E}}_{B},\;\; \bigwedge{}_{i\in I}{\epsilon}\,{}^{{ \mathfrak{S}}_A}_{{\mathfrak{l}}_A} (\sigma_{i,A})\bullet {\epsilon}\,{}^{{ \mathfrak{S}}_B}_{{\mathfrak{l}}_B} (\sigma_{i,B})=
\bigwedge{}_{j\in J}{\epsilon}\,{}^{{ \mathfrak{S}}_A}_{{\mathfrak{l}}_A} (\sigma'_{j,A})\bullet {\epsilon}\,{}^{{ \mathfrak{S}}_B}_{{\mathfrak{l}}_B} (\sigma'_{j,B}) \right)\nonumber\\
&&\hspace{5cm} \;\Rightarrow\; (\,\bigsqcap{}^{{}^{{{\overline{S}}}_{AB}}}_{i\in I} \sigma_{i,A}\widetilde{\otimes} \sigma_{i,B} = \bigsqcap{}^{{}^{{{\overline{S}}}_{AB}}}_{j\in J} \sigma'_{j,A}\widetilde{\otimes} \sigma'_{j,B}\,).\;\;\;\;\;\;\;\;\;\;\;
\end{eqnarray}
\end{lemma}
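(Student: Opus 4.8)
\textbf{Proof plan for Lemma \ref{lemmaseparatedminimal}.}

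The plan is to observe that this statement is essentially a restatement of the defining property of the quotient construction, together with the fact that the quotient map is injective "by design". First I would recall that $\overline{S}_{AB} = \overline{{ \mathfrak{S}}_{A}} \widetilde{\otimes} \overline{{ \mathfrak{S}}_{B}}$ is the quotient of ${ \mathcal{P}}(\overline{{ \mathfrak{S}}_{A}} \times \overline{{ \mathfrak{S}}_{B}})$ under the congruence $\approx$, where $u_{AB}\approx u'_{AB}$ iff $\nu^{\overline{{ \mathfrak{S}}_{A}}\overline{{ \mathfrak{S}}_{B}}}_{{\mathfrak{l}}_A,{\mathfrak{l}}_B}(u_{AB}) = \nu^{\overline{{ \mathfrak{S}}_{A}}\overline{{ \mathfrak{S}}_{B}}}_{{\mathfrak{l}}_A,{\mathfrak{l}}_B}(u'_{AB})$ for all ${\mathfrak{l}}_A\in { \mathfrak{E}}_A$, ${\mathfrak{l}}_B\in { \mathfrak{E}}_B$. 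An arbitrary element of $\overline{S}_{AB}$ written in the form $\bigsqcap{}^{{}^{{{\overline{S}}}_{AB}}}_{i\in I} \sigma_{i,A}\widetilde{\otimes} \sigma_{i,B}$ is, by construction, the class $\widetilde{u}$ of $u := \{(\sigma_{i,A},\sigma_{i,B})\;\vert\; i\in I\}$, and similarly for the primed family with $u' := \{(\sigma'_{j,A},\sigma'_{j,B})\;\vert\; j\in J\}$.

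The key step is then just to unwind the definition of $\nu$. By Definition \ref{definnubullet}, for each pair $({\mathfrak{l}}_A,{\mathfrak{l}}_B)$ we have
\begin{eqnarray}
\nu^{\overline{{ \mathfrak{S}}_{A}}\overline{{ \mathfrak{S}}_{B}}}_{{\mathfrak{l}}_A,{\mathfrak{l}}_B}(u) = \bigwedge{}_{i\in I}\; {\epsilon}\,{}^{{ \mathfrak{S}}_{A}}_{{ \mathfrak{l}}_A}(\sigma_{i,A}) \bullet {\epsilon}\,{}^{{ \mathfrak{S}}_{B}}_{{ \mathfrak{l}}_B}(\sigma_{i,B}),
\end{eqnarray}
and analogously $\nu^{\overline{{ \mathfrak{S}}_{A}}\overline{{ \mathfrak{S}}_{B}}}_{{\mathfrak{l}}_A,{\mathfrak{l}}_B}(u') = \bigwedge_{j\in J} {\epsilon}\,{}^{{ \mathfrak{S}}_{A}}_{{ \mathfrak{l}}_A}(\sigma'_{j,A}) \bullet {\epsilon}\,{}^{{ \mathfrak{S}}_{B}}_{{ \mathfrak{l}}_B}(\sigma'_{j,B})$. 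Thus the hypothesis of the lemma — that these two meets of bullet-products agree for every choice of ${\mathfrak{l}}_A\in { \mathfrak{E}}_A$ and ${\mathfrak{l}}_B\in { \mathfrak{E}}_B$ — says precisely that $\nu^{\overline{{ \mathfrak{S}}_{A}}\overline{{ \mathfrak{S}}_{B}}}_{{\mathfrak{l}}_A,{\mathfrak{l}}_B}(u) = \nu^{\overline{{ \mathfrak{S}}_{A}}\overline{{ \mathfrak{S}}_{B}}}_{{\mathfrak{l}}_A,{\mathfrak{l}}_B}(u')$ for all such ${\mathfrak{l}}_A,{\mathfrak{l}}_B$, i.e. $u\approx u'$. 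Passing to classes in the quotient, $\widetilde{u} = \widetilde{u'}$, which is exactly the conclusion $\bigsqcap{}^{{}^{{{\overline{S}}}_{AB}}}_{i\in I} \sigma_{i,A}\widetilde{\otimes} \sigma_{i,B} = \bigsqcap{}^{{}^{{{\overline{S}}}_{AB}}}_{j\in J} \sigma'_{j,A}\widetilde{\otimes} \sigma'_{j,B}$.

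The only point requiring a small amount of care — and the place where, if anything, an obstacle could lurk — is checking that an arbitrary element of $\overline{S}_{AB}$ really does admit a representation of the form $\bigsqcap{}^{{}^{{{\overline{S}}}_{AB}}}_{i\in I} \sigma_{i,A}\widetilde{\otimes} \sigma_{i,B}$, so that the statement covers all of $\overline{S}_{AB}\times\overline{S}_{AB}$; but this is immediate from the fact that ${ \mathcal{P}}(\overline{{ \mathfrak{S}}_{A}} \times \overline{{ \mathfrak{S}}_{B}})$ consists of \emph{all} subsets of $\overline{{ \mathfrak{S}}_{A}} \times \overline{{ \mathfrak{S}}_{B}}$ and every such subset $\{(\sigma_{i,A},\sigma_{i,B})\;\vert\; i\in I\}$ maps, under the quotient, to $\bigsqcap{}^{{}^{{{\overline{S}}}_{AB}}}_{i\in I} \sigma_{i,A}\widetilde{\otimes}\sigma_{i,B}$ by the very notational convention introduced alongside the quotient definition. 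Hence there is in fact no real obstacle: the lemma is a tautological consequence of the construction, and the proof is complete.
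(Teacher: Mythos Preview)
Your proof is correct and matches the paper's approach exactly: the paper offers no separate proof beyond the phrase ``by construction,'' and your argument simply makes explicit that the hypothesis is the definition of $u\approx u'$, whence $\widetilde{u}=\widetilde{u'}$.
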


\begin{theorem}
For any $\{\,\sigma_{i,A}\;\vert\; i\in I\,\}\subseteq \overline{{ \mathfrak{S}}_{A}}$, $\{\,\sigma_{j,B}\;\vert\; j\in J\,\}\subseteq \overline{{ \mathfrak{S}}_{B}}$, $\sigma_A\in \overline{{ \mathfrak{S}}_{A}}$ and $\sigma_B\in \overline{{ \mathfrak{S}}_{B}}$, we have the two following fundamental properties
\begin{eqnarray}
&& (\bigsqcap{}^{{}^{\overline{{ \mathfrak{S}}_{A}}}}_{i\in I}\,\sigma_{i,A})\widetilde{\otimes} \sigma_B =  \bigsqcap{}^{{}^{{{{\overline{S} }}_{AB}}}}_{i\in I} (\sigma_{i,A}\widetilde{\otimes} \sigma_B),\label{pitensor=tensorpi1preminimal}\\
&& \sigma_A \widetilde{\otimes}  (\bigsqcap{}^{{}^{\overline{{ \mathfrak{S}}_{B}}}}_{i\in I}\,\sigma_{i,B}) =  \bigsqcap{}^{{}^{{{{\overline{S} }}_{AB}}}}_{i\in I} (\sigma_A \widetilde{\otimes} \sigma_{i,B}).\label{pitensor=tensorpi2preminimal}
\end{eqnarray}
\end{theorem}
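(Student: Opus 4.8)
The plan is to establish the two bimorphic identities directly from the definition of the congruence relation $\approx$, using the separation property recorded in Lemma~\ref{lemmaseparatedminimal}. By symmetry it suffices to treat (\ref{pitensor=tensorpi1preminimal}); the proof of (\ref{pitensor=tensorpi2preminimal}) is obtained by swapping the roles of $A$ and $B$. The key observation is that both sides of (\ref{pitensor=tensorpi1preminimal}) are elements of ${{\overline{S}}}_{AB}$ of the form $\bigsqcap{}^{{}^{{{\overline{S}}}_{AB}}}_{k\in K}\sigma_{k,A}\widetilde{\otimes}\sigma_{k,B}$: the left-hand side corresponds to the singleton family $\{\,((\bigsqcap^{{}^{\overline{{ \mathfrak{S}}_{A}}}}_{i\in I}\sigma_{i,A}),\sigma_B)\,\}$ in ${ \mathcal{P}}(\overline{{ \mathfrak{S}}_{A}}\times\overline{{ \mathfrak{S}}_{B}})$, while the right-hand side corresponds to the family $\{\,(\sigma_{i,A},\sigma_B)\;\vert\;i\in I\,\}$. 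So the whole statement reduces to checking that these two subsets of $\overline{{ \mathfrak{S}}_{A}}\times\overline{{ \mathfrak{S}}_{B}}$ are $\approx$-congruent, which by Lemma~\ref{lemmaseparatedminimal} (or simply by the definition of $\approx$) amounts to an equality of $\nu\,{}^{\overline{{ \mathfrak{S}}_{A}}\overline{{ \mathfrak{S}}_{B}}}_{{\mathfrak{l}}_A,{\mathfrak{l}}_B}$ values for all ${\mathfrak{l}}_A\in { \mathfrak{E}}_{A}$, ${\mathfrak{l}}_B\in { \mathfrak{E}}_{B}$.

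First I would fix arbitrary ${\mathfrak{l}}_A\in { \mathfrak{E}}_{A}$ and ${\mathfrak{l}}_B\in { \mathfrak{E}}_{B}$ and compute both sides. On the singleton family, Definition~\ref{definnubullet} gives
\begin{eqnarray}
\nu\,{}^{\overline{{ \mathfrak{S}}_{A}}\overline{{ \mathfrak{S}}_{B}}}_{{\mathfrak{l}}_A,{\mathfrak{l}}_B}\!\left(\{\,((\bigsqcap{}^{{}^{\overline{{ \mathfrak{S}}_{A}}}}_{i\in I}\sigma_{i,A}),\sigma_B)\,\}\right) = {\epsilon}\,{}^{{ \mathfrak{S}}_{A}}_{{ \mathfrak{l}}_A}(\bigsqcap{}^{{}^{\overline{{ \mathfrak{S}}_{A}}}}_{i\in I}\sigma_{i,A}) \bullet {\epsilon}\,{}^{{ \mathfrak{S}}_{B}}_{{ \mathfrak{l}}_B}(\sigma_B), \nonumber
\end{eqnarray}
and I would then invoke the homomorphic property (\ref{axiomsigmainfsemilattice}) of the evaluation map, which yields ${\epsilon}\,{}^{{ \mathfrak{S}}_{A}}_{{ \mathfrak{l}}_A}(\bigsqcap{}^{{}^{\overline{{ \mathfrak{S}}_{A}}}}_{i\in I}\sigma_{i,A}) = \bigwedge{}_{i\in I}{\epsilon}\,{}^{{ \mathfrak{S}}_{A}}_{{ \mathfrak{l}}_A}(\sigma_{i,A})$. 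On the other family, Definition~\ref{definnubullet} directly gives $\nu\,{}^{\overline{{ \mathfrak{S}}_{A}}\overline{{ \mathfrak{S}}_{B}}}_{{\mathfrak{l}}_A,{\mathfrak{l}}_B}(\{\,(\sigma_{i,A},\sigma_B)\;\vert\;i\in I\,\}) = \bigwedge{}_{i\in I}\bigl({\epsilon}\,{}^{{ \mathfrak{S}}_{A}}_{{ \mathfrak{l}}_A}(\sigma_{i,A}) \bullet {\epsilon}\,{}^{{ \mathfrak{S}}_{B}}_{{ \mathfrak{l}}_B}(\sigma_B)\bigr)$. So the two sides agree as soon as
\begin{eqnarray}
\left(\bigwedge{}_{i\in I}{\epsilon}\,{}^{{ \mathfrak{S}}_{A}}_{{ \mathfrak{l}}_A}(\sigma_{i,A})\right) \bullet {\epsilon}\,{}^{{ \mathfrak{S}}_{B}}_{{ \mathfrak{l}}_B}(\sigma_B) = \bigwedge{}_{i\in I}\left({\epsilon}\,{}^{{ \mathfrak{S}}_{A}}_{{ \mathfrak{l}}_A}(\sigma_{i,A}) \bullet {\epsilon}\,{}^{{ \mathfrak{S}}_{B}}_{{ \mathfrak{l}}_B}(\sigma_B)\right), \nonumber
\end{eqnarray}
which is exactly the distributivity law (\ref{distributivitybullet}) for the monoid product $\bullet$ over arbitrary infima in ${ \mathfrak{B}}$, applied with $x = {\epsilon}\,{}^{{ \mathfrak{S}}_{B}}_{{ \mathfrak{l}}_B}(\sigma_B)$ and $B = \{\,{\epsilon}\,{}^{{ \mathfrak{S}}_{A}}_{{ \mathfrak{l}}_A}(\sigma_{i,A})\;\vert\;i\in I\,\}$ (using commutativity of $\bullet$). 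Since ${\mathfrak{l}}_A,{\mathfrak{l}}_B$ were arbitrary, the two families are $\approx$-congruent, hence define the same element of ${{\overline{S}}}_{AB}$, which is (\ref{pitensor=tensorpi1preminimal}).

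There is essentially no obstacle here: the whole argument is a bookkeeping unwinding of definitions, and the one non-formal ingredient — that $\bullet$ distributes over infima — is already available as property (\ref{distributivitybullet}). The only point requiring a small amount of care is making sure the infimum $\bigsqcap{}^{{}^{\overline{{ \mathfrak{S}}_{A}}}}_{i\in I}\sigma_{i,A}$ is evaluated in $\overline{{ \mathfrak{S}}_{A}}$ rather than in ${ \mathfrak{S}}_{A}$, but since $\overline{{ \mathfrak{S}}_{A}}$ is closed under arbitrary infima (by the axioms of a real structure, Definition~\ref{definitionstarstructure}) this infimum coincides with the one computed in ${ \mathfrak{S}}_{A}$, and (\ref{axiomsigmainfsemilattice}) applies verbatim. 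Finally (\ref{pitensor=tensorpi2preminimal}) follows by the same computation with the roles of the two factors interchanged, again using commutativity of $\bullet$ so that the distributivity law (\ref{distributivitybullet}) can be applied on the $B$-side.
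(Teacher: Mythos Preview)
Your proof is correct and follows essentially the same approach as the paper: both compute $\nu\,{}^{\overline{{ \mathfrak{S}}_{A}}\overline{{ \mathfrak{S}}_{B}}}_{{\mathfrak{l}}_A,{\mathfrak{l}}_B}$ on the two representing families, invoke the homomorphic property (\ref{axiomsigmainfsemilattice}) of $\epsilon^{{\mathfrak{S}}_A}$ together with the distributivity (\ref{distributivitybullet}) of $\bullet$ over $\wedge$, and then conclude via Lemma~\ref{lemmaseparatedminimal}. Your write-up is in fact slightly more explicit, in particular the remark that the infimum in $\overline{{\mathfrak{S}}_A}$ agrees with that in ${\mathfrak{S}}_A$ by closure of the real structure under infima.
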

\begin{proof}
Using the homomorphic property for ${\epsilon}\,{}^{{ \mathfrak{S}}_A}$ and the distributivity property between $\bullet$ and $\wedge$ in ${\mathfrak{B}}$, we deduce
\begin{eqnarray}
{\epsilon}\,{}^{{ \mathfrak{S}}_A}_{{\mathfrak{l}}_A} (\bigsqcap{}^{{}^{\overline{{ \mathfrak{S}}_{A}}}}_{i\in I}\sigma_{i,A}) \bullet {\epsilon}\,{}^{{ \mathfrak{S}}_B}_{{\mathfrak{l}}_B} (\sigma_B) 
&=&
(\bigwedge{}_{i\in I}\;{\epsilon}\,{}^{{ \mathfrak{S}}_A}_{{\mathfrak{l}}_A} (\sigma_{i,A})) \bullet {\epsilon}\,{}^{{ \mathfrak{S}}_B}_{{\mathfrak{l}}_B} (\sigma_B) \nonumber\\
&=&
\bigwedge{}_{i\in I}\;(\,{\epsilon}\,{}^{{ \mathfrak{S}}_A}_{{\mathfrak{l}}_A} (\sigma_{i,A})\bullet {\epsilon}\,{}^{{ \mathfrak{S}}_B}_{{\mathfrak{l}}_B} (\sigma_B)),\label{demobifilterproperty}
\end{eqnarray}
and then, using Lemma \ref{lemmaseparatedminimal}, we obtain the property (\ref{pitensor=tensorpi1preminimal}). We obtain the property (\ref{pitensor=tensorpi2preminimal}) along the same lines of proof.  
\end{proof}

\begin{definition}\label{defsqsubseteqSAB}
${{\overline{S} }}_{AB}$ is equipped with a partial order defined according to
\begin{eqnarray}
\forall \widetilde{\sigma}_{AB},\widetilde{\sigma}'_{AB} \in {{{\overline{S} }}_{AB}},\;\;\;\; (\,\widetilde{\sigma}_{AB} \sqsubseteq_{{}_{{{\overline{S} }}_{AB}}} \widetilde{\sigma}'_{AB}\,) & :\Leftrightarrow &\nonumber\\
&&\hspace{-2cm}(\,\forall {\mathfrak{l}}_A\in { \mathfrak{E}}_{A},\forall {\mathfrak{l}}_B\in { \mathfrak{E}}_{B},\;\;
{\nu}\,{}^{\overline{{ \mathfrak{S}}_{A}{ \mathfrak{S}}_{B}}}_{{\mathfrak{l}}_A, {\mathfrak{l}}_B} (\widetilde{\sigma}_{AB})\leq {\nu}\,{}^{\overline{{ \mathfrak{S}}_{A}{ \mathfrak{S}}_{B}}}_{{\mathfrak{l}}_A,{\mathfrak{l}}_B} (\widetilde{\sigma}'_{AB})\,).\;\;\;\;\;\;\;\;\;\;\;\;
\end{eqnarray}
\end{definition}

\begin{lemma}\label{Lemmadevelopetildeleqetilde}
Let us consider $u_{AB}:=\{\, (\sigma_{i,A},\sigma_{i,B})\;\vert\; i\in I\,\}$ an element of ${ \mathcal{P}}(\overline{{ \mathfrak{S}}_{A}} \times \overline{{ \mathfrak{S}}_{B}})$. We have explicitly, for any $\sigma_{A}\in { \mathfrak{S}}_{A}$ and $\sigma_{B}\in { \mathfrak{S}}_{B}$, the following equivalence
\begin{eqnarray}
 \left( \widetilde{u_{AB}} \sqsubseteq_{{}_{{{\overline{S} }}_{AB}}} \widetilde{(\sigma_{A}, \sigma_{B})} \right)
&\Leftrightarrow &
\left( (\bigsqcap{}^{{}_{\overline{{ \mathfrak{S}}_{A}}}}_{k\in I}\; \sigma_{k,A}) \;\sqsubseteq_{{}_{\overline{{ \mathfrak{S}}_{A}}}}\sigma_{A}
\;\;\textit{\rm and}\;\;
(\bigsqcap{}^{{}_{\overline{{ \mathfrak{S}}_{B}}}}_{m\in I} \;\sigma_{m,B})\; \sqsubseteq_{{}_{\overline{{ \mathfrak{S}}_{B}}}} \sigma_{B}
\;\;\textit{\rm and}\;\;\right.\nonumber\\
&&\hspace{-2cm}\left.\left(\forall \varnothing \varsubsetneq K  \varsubsetneq I, \;\; (\bigsqcap{}^{{}_{\overline{{ \mathfrak{S}}_{A}}}}_{k\in K}\; \sigma_{k,A}) \;\sqsubseteq_{{}_{\overline{{ \mathfrak{S}}_{A}}}}\sigma_{A}\;\;\;\textit{\rm or}\;\;\; (\bigsqcap{}^{{}_{\overline{{ \mathfrak{S}}_{B}}}}_{m\in I-K} \;\sigma_{m,B})\; \sqsubseteq_{{}_{\overline{{ \mathfrak{S}}_{B}}}} \sigma_{B}\right) \right).\;\;\;\;\;\;\;\;\;\;\;\;\;\;
\label{developmentetildeordersimplify}
\end{eqnarray}
It is recalled that ${ \mathfrak{S}}_{A}$ and ${ \mathfrak{S}}_{B}$ are down-complete Inf semi-lattice and then the infima in this formula are well-defined.
\end{lemma}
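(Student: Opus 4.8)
The plan is to unfold the definition of $\sqsubseteq_{{}_{{{\overline{S} }}_{AB}}}$ (Definition \ref{defsqsubseteqSAB}) and then exploit the flatness of the poset $(\mathfrak{B},\leq)$. Since $u\leq v$ in $\mathfrak{B}$ means $u=\bot$ or $u=v$ (by (\ref{orderB})), the relation $\widetilde{u_{AB}}\sqsubseteq_{{}_{{{\overline{S} }}_{AB}}}\widetilde{(\sigma_{A},\sigma_{B})}$ is equivalent to the conjunction of two independent implications holding for every ${\mathfrak{l}}_A\in{\mathfrak{E}}_A$, ${\mathfrak{l}}_B\in{\mathfrak{E}}_B$: condition $(Y)$, that $\nu^{\overline{{ \mathfrak{S}}_{A}}\overline{{ \mathfrak{S}}_{B}}}_{{\mathfrak{l}}_A,{\mathfrak{l}}_B}(u_{AB})=\textit{\bf Y}$ entails $\epsilon^{{ \mathfrak{S}}_A}_{{\mathfrak{l}}_A}(\sigma_A)\bullet\epsilon^{{ \mathfrak{S}}_B}_{{\mathfrak{l}}_B}(\sigma_B)=\textit{\bf Y}$; and condition $(N)$, the same statement with $\textit{\bf N}$ in place of $\textit{\bf Y}$ everywhere. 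I will show that $(Y)$ is exactly the first two clauses on the right-hand side of (\ref{developmentetildeordersimplify}), and that, given those, $(N)$ is exactly the third clause. The elementary facts used repeatedly, read off from (\ref{expressionbullet}) and (\ref{orderB}), are: $x\bullet y=\textit{\bf Y}$ iff $x=y=\textit{\bf Y}$; $x\bullet y=\textit{\bf N}$ iff $x=\textit{\bf N}$ or $y=\textit{\bf N}$; a non-empty $\bigwedge$ in $\mathfrak{B}$ equals $\textit{\bf Y}$ (resp.\ $\textit{\bf N}$) iff all its members do; and, for a real effect ${\mathfrak{l}}_{(\alpha,\alpha')}$, $\epsilon_{{\mathfrak{l}}_{(\alpha,\alpha')}}(\sigma)=\textit{\bf Y}\Leftrightarrow\alpha\sqsubseteq\sigma$ and $\epsilon_{{\mathfrak{l}}_{(\alpha,\alpha')}}(\sigma)=\textit{\bf N}\Leftrightarrow\alpha'\sqsubseteq\sigma$, the two being mutually exclusive since $\neg\widehat{\alpha\alpha'}{}^{{}^{{ \mathfrak{S}}}}$. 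I will also use that $\overline{\mathfrak{S}}$ is closed under arbitrary infima, so that each partial infimum $\bigsqcap_{k\in K}\sigma_{k,A}$ or $\bigsqcap_{m\in I\setminus K}\sigma_{m,B}$ is again a real state and hence is a legitimate negative component of a genuine real effect of the form ${\mathfrak{l}}_{(\centerdot,\,\cdot\,)}$ (cf.\ the description (\ref{reducedspaceofeffects})).

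For $(Y)$: expanding $\nu^{\overline{{ \mathfrak{S}}_{A}}\overline{{ \mathfrak{S}}_{B}}}_{{\mathfrak{l}}_A,{\mathfrak{l}}_B}(u_{AB})=\bigwedge_{i\in I}\epsilon^{{ \mathfrak{S}}_A}_{{\mathfrak{l}}_A}(\sigma_{i,A})\bullet\epsilon^{{ \mathfrak{S}}_B}_{{\mathfrak{l}}_B}(\sigma_{i,B})$, this value is $\textit{\bf Y}$ iff $\epsilon^{{ \mathfrak{S}}_A}_{{\mathfrak{l}}_A}(\sigma_{i,A})=\epsilon^{{ \mathfrak{S}}_B}_{{\mathfrak{l}}_B}(\sigma_{i,B})=\textit{\bf Y}$ for every $i$, i.e., by the homomorphy (\ref{axiomsigmainfsemilattice}), iff $\epsilon_{{\mathfrak{l}}_A}(\tau_A)=\textit{\bf Y}$ and $\epsilon_{{\mathfrak{l}}_B}(\tau_B)=\textit{\bf Y}$, where $\tau_A:=\bigsqcap_{k\in I}\sigma_{k,A}$ and $\tau_B:=\bigsqcap_{m\in I}\sigma_{m,B}$. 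Thus $(Y)$ says: for all real effects, $\epsilon_{{\mathfrak{l}}_A}(\tau_A)=\textit{\bf Y}$ and $\epsilon_{{\mathfrak{l}}_B}(\tau_B)=\textit{\bf Y}$ imply $\epsilon_{{\mathfrak{l}}_A}(\sigma_A)=\textit{\bf Y}$ and $\epsilon_{{\mathfrak{l}}_B}(\sigma_B)=\textit{\bf Y}$. If $\tau_A\sqsubseteq\sigma_A$ and $\tau_B\sqsubseteq\sigma_B$ this is immediate; conversely, feeding $(Y)$ the probe effects ${\mathfrak{l}}_A:={\mathfrak{l}}_{(\tau_A,\centerdot)}$ and ${\mathfrak{l}}_B:={\mathfrak{Y}}_{{ \mathfrak{E}}_{{ \mathfrak{S}}_B}}$ (whose hypotheses hold automatically, by (\ref{ety})) forces $\tau_A\sqsubseteq\sigma_A$, and the symmetric choice forces $\tau_B\sqsubseteq\sigma_B$. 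Hence $(Y)$ is precisely the first two clauses of (\ref{developmentetildeordersimplify}).

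For $(N)$: writing ${\mathfrak{l}}_A={\mathfrak{l}}_{(\alpha_A,\alpha'_A)}$ and ${\mathfrak{l}}_B={\mathfrak{l}}_{(\alpha_B,\alpha'_B)}$, the value $\nu^{\overline{{ \mathfrak{S}}_{A}}\overline{{ \mathfrak{S}}_{B}}}_{{\mathfrak{l}}_A,{\mathfrak{l}}_B}(u_{AB})$ is $\textit{\bf N}$ iff for every $i$ one has $\alpha'_A\sqsubseteq\sigma_{i,A}$ or $\alpha'_B\sqsubseteq\sigma_{i,B}$; putting $K:=\{\,i\in I\mid\alpha'_A\sqsubseteq\sigma_{i,A}\,\}$ this is equivalent to $\alpha'_A\sqsubseteq\bigsqcap_{k\in K}\sigma_{k,A}$ and $\alpha'_B\sqsubseteq\bigsqcap_{m\in I\setminus K}\sigma_{m,B}$ (with the obvious reading when $\alpha'_A=\centerdot$ or $\alpha'_B=\centerdot$, which forces $K=\varnothing$ or $K=I$). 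Granting the first two clauses of (\ref{developmentetildeordersimplify}): when $K=I$ we get $\alpha'_A\sqsubseteq\tau_A\sqsubseteq\sigma_A$; when $K=\varnothing$ we get $\alpha'_B\sqsubseteq\tau_B\sqsubseteq\sigma_B$; and when $\varnothing\varsubsetneq K\varsubsetneq I$ the third clause applied to this $K$ yields $\bigsqcap_{k\in K}\sigma_{k,A}\sqsubseteq\sigma_A$ or $\bigsqcap_{m\in I\setminus K}\sigma_{m,B}\sqsubseteq\sigma_B$; composing with the inclusions just displayed, in every case $\epsilon_{{\mathfrak{l}}_A}(\sigma_A)=\textit{\bf N}$ or $\epsilon_{{\mathfrak{l}}_B}(\sigma_B)=\textit{\bf N}$, hence $\epsilon_{{\mathfrak{l}}_A}(\sigma_A)\bullet\epsilon_{{\mathfrak{l}}_B}(\sigma_B)=\textit{\bf N}$ and $(N)$ holds. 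For the converse, assume $(N)$, fix $\varnothing\varsubsetneq K\varsubsetneq I$, and feed $(N)$ the probe effects ${\mathfrak{l}}_A:={\mathfrak{l}}_{(\centerdot,\,\bigsqcap_{k\in K}\sigma_{k,A})}$ and ${\mathfrak{l}}_B:={\mathfrak{l}}_{(\centerdot,\,\bigsqcap_{m\in I\setminus K}\sigma_{m,B})}$; for these, $\nu^{\overline{{ \mathfrak{S}}_{A}}\overline{{ \mathfrak{S}}_{B}}}_{{\mathfrak{l}}_A,{\mathfrak{l}}_B}(u_{AB})=\textit{\bf N}$ by construction, so $(N)$ forces $\bigsqcap_{k\in K}\sigma_{k,A}\sqsubseteq\sigma_A$ or $\bigsqcap_{m\in I\setminus K}\sigma_{m,B}\sqsubseteq\sigma_B$, which is the third clause of (\ref{developmentetildeordersimplify}). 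Together with the identification of $(Y)$ above, this closes the equivalence.

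The only genuinely delicate point, which I would write out in detail, is the bookkeeping around the $\centerdot$-conventions: one must check that the probe effects ${\mathfrak{l}}_{(\tau_A,\centerdot)}$, ${\mathfrak{l}}_{(\centerdot,\,\bigsqcap_{k\in K}\sigma_{k,A})}$, and their analogues really lie in ${\mathfrak{E}}_A=\overline{\mathfrak{E}}_{{ \mathfrak{S}}_A}$ or ${\mathfrak{E}}_B=\overline{\mathfrak{E}}_{{ \mathfrak{S}}_B}$ (this uses the closure of $\overline{\mathfrak{S}}$ under infima together with (\ref{reducedspaceofeffects}) and, for $\bot_{{}_{ \mathfrak{S}}}$-components, (\ref{axiomreduc3})--(\ref{axiomreduc4})), and that $\epsilon$ applied to an effect with a $\centerdot$-entry never returns the wrong value ($\textit{\bf Y}$ for ${\mathfrak{l}}_{(\centerdot,\beta)}$, $\textit{\bf N}$ for ${\mathfrak{l}}_{(\beta,\centerdot)}$), so that the case distinctions above are exhaustive. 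Apart from this, the proof is a mechanical case analysis powered by the flatness of $\mathfrak{B}$ and the homomorphy (\ref{axiomsigmainfsemilattice}) of $\epsilon$; I do not anticipate any deeper obstruction.
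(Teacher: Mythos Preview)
Your proposal is correct and takes essentially the same approach as the paper: both arguments extract the three clauses by testing the defining inequality against specific probe effects (the paper uses ${\mathfrak{l}}_B={\mathfrak{Y}}_{{\mathfrak{E}}_B}$ and then ${\mathfrak{l}}_A={\mathfrak{l}}_{(\centerdot,\,\bigsqcap_{k\in K}\sigma_{k,A})}$, ${\mathfrak{l}}_B={\mathfrak{l}}_{(\centerdot,\,\bigsqcap_{m\in I\setminus K}\sigma_{m,B})}$, exactly as you do). Your organization via the $(Y)$/$(N)$ split afforded by the flatness of $\mathfrak{B}$ is a bit cleaner, and you actually write out the converse direction, whereas the paper only derives the right-hand clauses from the left and then declares ``We let the reader check that we have obtained the whole set of independent inequalities reformulating the property''; your case analysis on $K$ fills precisely that gap.
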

\begin{proof}
We intent to expand the inequality $ \widetilde{u_{AB}} \sqsubseteq_{{}_{{{\overline{S} }}_{AB}}} \widetilde{(\sigma_{A}, \sigma_{B})}$. It is equivalent to
\begin{eqnarray}
\forall {\mathfrak{l}}_A\in { \mathfrak{E}}_{A}, \forall {\mathfrak{l}}_B\in { \mathfrak{E}}_{B},&&
\left( \bigwedge{}_{i\in I}\;
{\epsilon}\,{}^{{ \mathfrak{S}}_{A}}_{{\mathfrak{l}}_A} ( \sigma_{i,A})\bullet {\epsilon}\,{}^{{ \mathfrak{S}}_{B}}_{{\mathfrak{l}}_B} ( \sigma_{i,B})\right)
\leq
{\epsilon}\,{}^{{ \mathfrak{S}}_{A}}_{{\mathfrak{l}}_A} (\sigma_{A}) \bullet {\epsilon}\,{}^{{ \mathfrak{S}}_{B}}_{{\mathfrak{l}}_B}(\sigma_{B}).\label{assumptionorder2}\;\;\;\;\;\;\;\;\;\;\;\;
\end{eqnarray}
We intent to choose a pertinent set of effects ${\mathfrak{l}}_A\in { \mathfrak{E}}_{A}$
 and ${\mathfrak{l}}_B\in { \mathfrak{E}}_{B}$ to reformulate this inequality. \\
Let us firstly choose ${\mathfrak{l}}_B={ \mathfrak{Y}}_{{}_{{ \mathfrak{E}}_{B}}}$. Using  (\ref{expressionbullet}), we obtain
\begin{eqnarray}
{\epsilon}\,{}^{{ \mathfrak{S}}_{A}}_{{ \mathfrak{l}}_A}(\bigsqcap{}^{{}^{\overline{{ \mathfrak{S}}_{A}}}}_{i\in I}\; \sigma_{i,A})\leq {\epsilon}\,{}^{{ \mathfrak{S}}_{A}}_{{ \mathfrak{l}}_A}(\sigma_{A})
,\forall { \mathfrak{l}}_A \in { \mathfrak{E}}_A,
\end{eqnarray} 
which leads immediately
\begin{eqnarray}
\bigsqcap{}^{{}^{\overline{{ \mathfrak{S}}_{A}}}}_{i\in I}\; \sigma_{i,A} \;\sqsubseteq_{{}_{\overline{{ \mathfrak{S}}_{A}}}} \sigma_{A}.\label{ineqBY}
\end{eqnarray}
Choosing ${\mathfrak{l}}_A={ \mathfrak{Y}}_{{}_{{ \mathfrak{E}}_{A}}}$, we obtain along the same line 
\begin{eqnarray}
\bigsqcap{}^{{}^{\overline{{ \mathfrak{S}}_{B}}}}_{i\in I} \;\sigma_{i,B}\; \sqsubseteq_{{}_{\overline{{ \mathfrak{S}}_{B}}}} \sigma_{B}.\label{ineqAY}
\end{eqnarray}
Let us now consider $\varnothing \varsubsetneq K  \varsubsetneq I$ and let us choose the real effects ${ \mathfrak{l}}_A$ and ${ \mathfrak{l}}_B$ as follows
\begin{eqnarray}
%{\epsilon}\,{}^{{ \mathfrak{S}}_{A}}_{{ \mathfrak{l}}_A}(\sigma):=\textit{\bf N},\forall \sigma \sqsupseteq_{{}_{{ \mathfrak{S}}_{A}}} \bigsqcap{}^{{}^{{ \mathfrak{S}}_{A}}}_{k\in K}\; \sigma_{k,A}&\textit{ \rm and}&{\epsilon}\,{}^{{ \mathfrak{S}}_{A}}_{{ \mathfrak{l}}_A}(\sigma):=\bot, \;\;\textit{ \rm elsewhere}, \\
%{\epsilon}\,{}^{{ \mathfrak{S}}_{B}}_{{ \mathfrak{l}}_B}(\sigma):=\textit{\bf N},\forall \sigma \sqsupseteq_{{}_{{ \mathfrak{S}}_{B}}} \bigsqcap{}^{{}^{{ \mathfrak{S}}_{B}}}_{m\in I-K}\; \sigma_{m,B}&\textit{ \rm and}& {\epsilon}\,{}^{{ \mathfrak{S}}_{B}}_{{ \mathfrak{l}}_B}(\sigma):=\bot, \;\;\textit{ \rm elsewhere}.\\
{ \mathfrak{l}}_A:={ \mathfrak{l}}_{(\centerdot\,,\,\bigsqcap{}^{{}^{\overline{{ \mathfrak{S}}_{A}}}}_{k\in K}\; \sigma_{k,A})} && { \mathfrak{l}}_B:={ \mathfrak{l}}_{(\centerdot\,,\,\bigsqcap{}^{{}^{\overline{{ \mathfrak{S}}_{B}}}}_{m\in I-K}\; \sigma_{m,B})}
\end{eqnarray}
Note that ${ \mathfrak{l}}_A$ and ${ \mathfrak{l}}_B$ are \underline{real} effects because $\bigsqcap{}^{{}^{{ \mathfrak{S}}_{A}}}_{k\in K}\; \sigma_{k,A}\in \overline{{ \mathfrak{S}}_{A}}$ and $\bigsqcap{}^{{}^{{ \mathfrak{S}}_{B}}}_{m\in I-K}\; \sigma_{m,B}\in \overline{{ \mathfrak{S}}_{B}}$.\\
We deduce, from the assumption (\ref{assumptionorder2}), that, for this $\varnothing \varsubsetneq K  \varsubsetneq I$, we have
\begin{eqnarray}
 &&(\bigsqcap{}^{{}^{\overline{{ \mathfrak{S}}_{A}}}}_{k\in K}\; \sigma_{k,A} \;\sqsubseteq_{{}_{\overline{{ \mathfrak{S}}_{A}}}} \sigma_{A}) \;\;\textit{\rm or}\;\; (\bigsqcap{}^{{}^{\overline{{ \mathfrak{S}}_{B}}}}_{m\in I-K} \;\sigma_{m,B}\; \sqsubseteq_{{}_{\overline{{ \mathfrak{S}}_{B}}}} \sigma_{B}).\label{ineqN}
\end{eqnarray}
We let the reader check that we have obtained the whole set of independent inequalities reformulating the property (\ref{assumptionorder2}).
\end{proof}

\begin{theorem} \label{theoremaxiomA1bipartitepre}
${{\overline{S} }}_{AB}$ is a down-complete Inf semi-lattice with
\begin{eqnarray}
\forall \{\,u_i\;\vert\; i\in I\,\}\subseteq { \mathcal{P}}(\overline{{ \mathfrak{S}}_{A}} \times \overline{{ \mathfrak{S}}_{B}}),&&\bigsqcap{}^{{}^{{{\overline{S} }}_{AB}}}_{i\in I} \; \widetilde{u_i}  = \widetilde{\bigcup{}_{i\in I}\; u_i}.\label{tensorinfSAB}
\end{eqnarray}
%Moreover, for any ${\mathfrak{l}}_A\in { \mathfrak{E}}_{A}$ and ${\mathfrak{l}}_B\in { \mathfrak{E}}_{B}$,  we have
%\begin{eqnarray}
%\forall \{\,\widetilde{u}_i\;\vert\; i\in I\,\}\subseteq {{\overline{S} }}_{AB},&& {\nu}\,{}^{AB}_{{\mathfrak{l}}_A,{\mathfrak{l}}_B}(\bigsqcap{}^{{}^{{{\overline{S} }}_{AB}}}_{i\in I} \widetilde{u}_i) =  \bigwedge{}_{\! i\in I}\, {\nu}\,{}^{AB}_{{\mathfrak{l}}_A,{\mathfrak{l}}_B}( \widetilde{u}_i)\label{tensorinfcontinuousSAB}
%\end{eqnarray}
\end{theorem}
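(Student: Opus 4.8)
The plan is to verify the two defining requirements of a down-complete Inf semi-lattice for ${{\overline{S}}}_{AB}$: first that an arbitrary family of elements admits an infimum, and second that this infimum is given by the quotient of the union of representatives, i.e.\ formula (\ref{tensorinfSAB}). The natural starting point is the observation that the congruence $\approx$ is defined precisely so that two families are identified iff all the bilinear evaluations $\nu\,{}^{\overline{{ \mathfrak{S}}_{A}}\overline{{ \mathfrak{S}}_{B}}}_{{\mathfrak{l}}_A,{\mathfrak{l}}_B}$ agree on them, and the partial order $\sqsubseteq_{{}_{{{\overline{S}}}_{AB}}}$ of Definition \ref{defsqsubseteqSAB} is the pointwise order induced by these same evaluations. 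So everything reduces to a statement about the family of maps $\nu\,{}^{\overline{{ \mathfrak{S}}_{A}}\overline{{ \mathfrak{S}}_{B}}}_{{\mathfrak{l}}_A,{\mathfrak{l}}_B}$ on ${ \mathcal{P}}(\overline{{ \mathfrak{S}}_{A}} \times \overline{{ \mathfrak{S}}_{B}})$.

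The key computation is that each $\nu\,{}^{\overline{{ \mathfrak{S}}_{A}}\overline{{ \mathfrak{S}}_{B}}}_{{\mathfrak{l}}_A,{\mathfrak{l}}_B}$ turns unions into infima in ${ \mathfrak{B}}$: directly from Definition \ref{definnubullet}, for any family $\{u_i\;\vert\;i\in I\}\subseteq { \mathcal{P}}(\overline{{ \mathfrak{S}}_{A}} \times \overline{{ \mathfrak{S}}_{B}})$ we have
\begin{eqnarray}
\nu\,{}^{\overline{{ \mathfrak{S}}_{A}}\overline{{ \mathfrak{S}}_{B}}}_{{\mathfrak{l}}_A,{\mathfrak{l}}_B}\Big(\bigcup{}_{i\in I} u_i\Big) = \bigwedge{}_{i\in I}\; \nu\,{}^{\overline{{ \mathfrak{S}}_{A}}\overline{{ \mathfrak{S}}_{B}}}_{{\mathfrak{l}}_A,{\mathfrak{l}}_B}(u_i),
\end{eqnarray}
since the right-hand side of the definition is just a big infimum $\bigwedge$ over all the index pairs appearing in the $u_i$, and $\bigwedge$ in ${ \mathfrak{B}}$ is associative. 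From this identity I would argue as follows. Let $\widetilde{w}:=\widetilde{\bigcup_{i\in I} u_i}$. For every $j\in I$ we have $\bigcup_i u_i \supseteq u_j$, hence $\nu_{{\mathfrak{l}}_A,{\mathfrak{l}}_B}(\bigcup_i u_i)\leq \nu_{{\mathfrak{l}}_A,{\mathfrak{l}}_B}(u_j)$ for all effects, so $\widetilde{w}\sqsubseteq_{{}_{{{\overline{S}}}_{AB}}} \widetilde{u_j}$; thus $\widetilde{w}$ is a lower bound. Conversely, suppose $\widetilde{v}$ (represented by $v\in { \mathcal{P}}(\overline{{ \mathfrak{S}}_{A}} \times \overline{{ \mathfrak{S}}_{B}})$) satisfies $\widetilde{v}\sqsubseteq_{{}_{{{\overline{S}}}_{AB}}}\widetilde{u_j}$ for all $j$. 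Then for each effect pair, $\nu_{{\mathfrak{l}}_A,{\mathfrak{l}}_B}(v)\leq \nu_{{\mathfrak{l}}_A,{\mathfrak{l}}_B}(u_j)$ for every $j$, hence $\nu_{{\mathfrak{l}}_A,{\mathfrak{l}}_B}(v)\leq \bigwedge_j \nu_{{\mathfrak{l}}_A,{\mathfrak{l}}_B}(u_j)=\nu_{{\mathfrak{l}}_A,{\mathfrak{l}}_B}(\bigcup_i u_i)$, which is exactly $\widetilde{v}\sqsubseteq_{{}_{{{\overline{S}}}_{AB}}}\widetilde{w}$. Therefore $\widetilde{w}$ is the greatest lower bound, establishing both that infima of arbitrary families exist and that they are computed by (\ref{tensorinfSAB}).

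I expect no serious obstacle here; the only subtlety worth spelling out is well-definedness — that the right-hand side $\widetilde{\bigcup_i u_i}$ of (\ref{tensorinfSAB}) does not depend on the choice of representatives $u_i$ of the classes $\widetilde{u_i}$. This follows because if $u_i\approx u_i'$ for all $i$ then $\nu_{{\mathfrak{l}}_A,{\mathfrak{l}}_B}(\bigcup_i u_i)=\bigwedge_i \nu_{{\mathfrak{l}}_A,{\mathfrak{l}}_B}(u_i)=\bigwedge_i \nu_{{\mathfrak{l}}_A,{\mathfrak{l}}_B}(u_i')=\nu_{{\mathfrak{l}}_A,{\mathfrak{l}}_B}(\bigcup_i u_i')$, so $\bigcup_i u_i\approx \bigcup_i u_i'$. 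One should also note the existence of a bottom element, which is immediate: the class of $\varnothing$ (the empty union), or equivalently $\iota^{{{\overline{S}}}_{AB}}(\bot_{{}_{\overline{{ \mathfrak{S}}_{A}}}},\bot_{{}_{\overline{{ \mathfrak{S}}_{B}}}})$, is below everything since $\nu_{{\mathfrak{l}}_A,{\mathfrak{l}}_B}(\varnothing)=\bigwedge\varnothing=\bot$ in ${ \mathfrak{B}}$ — though strictly this is not needed for the statement as phrased, which asserts only the Inf semi-lattice structure with the displayed formula for infima.
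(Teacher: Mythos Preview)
Your argument is correct and is essentially what the paper has in mind: the paper states this theorem without proof, treating it as immediate from the construction (the quotient is built precisely so that the evaluations $\nu\,{}^{\overline{{ \mathfrak{S}}_{A}}\overline{{ \mathfrak{S}}_{B}}}_{{\mathfrak{l}}_A,{\mathfrak{l}}_B}$ separate points and define the order, and these evaluations send unions to infima in ${ \mathfrak{B}}$). Your key identity $\nu_{{\mathfrak{l}}_A,{\mathfrak{l}}_B}\bigl(\bigcup_i u_i\bigr)=\bigwedge_i \nu_{{\mathfrak{l}}_A,{\mathfrak{l}}_B}(u_i)$ together with the greatest-lower-bound verification and the well-definedness check is exactly the content.

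One small correction to your final aside: the claim $\nu_{{\mathfrak{l}}_A,{\mathfrak{l}}_B}(\varnothing)=\bigwedge\varnothing=\bot$ is wrong. An empty infimum, when it exists, is a \emph{top} element, and ${ \mathfrak{B}}$ has no top ($\textit{\bf Y}$ and $\textit{\bf N}$ are incomparable), so $\nu_{{\mathfrak{l}}_A,{\mathfrak{l}}_B}(\varnothing)$ is simply undefined. The class $\widetilde{\varnothing}$ is therefore not a meaningful object, and in particular it is not the same as $\bot_{{}_{\overline{{ \mathfrak{S}}_{A}}}}\widetilde{\otimes}\bot_{{}_{\overline{{ \mathfrak{S}}_{B}}}}$. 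The bottom element of ${{\overline{S}}}_{AB}$ is indeed $\bot_{{}_{\overline{{ \mathfrak{S}}_{A}}}}\widetilde{\otimes}\bot_{{}_{\overline{{ \mathfrak{S}}_{B}}}}$, but the justification is via the expansion (\ref{developmentetildeordersimplify}), not via the empty set --- the paper handles this separately in Theorem~\ref{bottomcompound}. As you note yourself, this is not needed for the statement at hand.
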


\begin{theorem}\label{bottomcompound}
If ${ \mathfrak{S}}_{A}$ and ${ \mathfrak{S}}_{B}$ admit $\bot_{{}_{{\mathfrak{S}}_A}}$ and $\bot_{{}_{{\mathfrak{S}}_B}}$ respectively as bottom elements, then
${{\overline{S} }}_{AB}$ admits a bottom element explicitly given by $\bot_{{}_{{{\overline{S} }}_{AB}}}=\bot_{{}_{{\mathfrak{S}}_A}}\widetilde{\otimes}\bot_{{}_{{\mathfrak{S}}_B}}$.
\end{theorem}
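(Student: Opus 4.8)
The plan is to verify directly that $\bot_{{}_{\mathfrak{S}_A}}\widetilde{\otimes}\bot_{{}_{\mathfrak{S}_B}}$ sits below every element of ${{\overline{S} }}_{AB}$, using Definition \ref{defsqsubseteqSAB} for the order on ${{\overline{S} }}_{AB}$ and the Inf semi-lattice structure of Theorem \ref{theoremaxiomA1bipartitepre}. Two preliminary facts are needed. First, by the homomorphic property (\ref{axiomsigmainfsemilattice}) each map $\epsilon^{\mathfrak{S}}_{\mathfrak{l}}$ is order-preserving: applying it to $\{\sigma,\sigma'\}$ with $\sigma\sqsubseteq_{{}_{\mathfrak{S}}}\sigma'$, so that $\sigma\sqcap_{{}_{\mathfrak{S}}}\sigma'=\sigma$, gives $\epsilon^{\mathfrak{S}}_{\mathfrak{l}}(\sigma)=\epsilon^{\mathfrak{S}}_{\mathfrak{l}}(\sigma)\wedge\epsilon^{\mathfrak{S}}_{\mathfrak{l}}(\sigma')$, i.e. $\epsilon^{\mathfrak{S}}_{\mathfrak{l}}(\sigma)\leq\epsilon^{\mathfrak{S}}_{\mathfrak{l}}(\sigma')$; and by the distributivity law (\ref{distributivitybullet}) the monoid product $\bullet$ is order-preserving in each argument. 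Second, since $\overline{\mathfrak{S}}_A$ contains $\bot_{{}_{\mathfrak{S}_A}}$ and is closed under infima taken in $\mathfrak{S}_A$ (Definition \ref{definitionstarstructure}), the element $\bot_{{}_{\mathfrak{S}_A}}$ is the least element of $\overline{\mathfrak{S}}_A$, and symmetrically $\bot_{{}_{\mathfrak{S}_B}}$ is the least element of $\overline{\mathfrak{S}}_B$.

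Next I would treat pure tensors: for all $\sigma_A\in\overline{\mathfrak{S}}_A$ and $\sigma_B\in\overline{\mathfrak{S}}_B$ one has $\bot_{{}_{\mathfrak{S}_A}}\widetilde{\otimes}\bot_{{}_{\mathfrak{S}_B}}\sqsubseteq_{{}_{{{\overline{S} }}_{AB}}}\sigma_A\widetilde{\otimes}\sigma_B$. Unfolding Definition \ref{defsqsubseteqSAB}, this is exactly the statement that $\epsilon^{\mathfrak{S}_A}_{\mathfrak{l}_A}(\bot_{{}_{\mathfrak{S}_A}})\bullet\epsilon^{\mathfrak{S}_B}_{\mathfrak{l}_B}(\bot_{{}_{\mathfrak{S}_B}})\leq\epsilon^{\mathfrak{S}_A}_{\mathfrak{l}_A}(\sigma_A)\bullet\epsilon^{\mathfrak{S}_B}_{\mathfrak{l}_B}(\sigma_B)$ for every $\mathfrak{l}_A\in\mathfrak{E}_A$ and $\mathfrak{l}_B\in\mathfrak{E}_B$, which is immediate from $\bot_{{}_{\mathfrak{S}_A}}\sqsubseteq_{{}_{\overline{\mathfrak{S}}_A}}\sigma_A$, $\bot_{{}_{\mathfrak{S}_B}}\sqsubseteq_{{}_{\overline{\mathfrak{S}}_B}}\sigma_B$ together with the two monotonicity facts recorded above.

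Finally I would pass to an arbitrary element of ${{\overline{S} }}_{AB}$. Every such element is of the form $\widetilde{u}$ with $u\in\mathcal{P}(\overline{\mathfrak{S}}_A\times\overline{\mathfrak{S}}_B)$. If $u=\varnothing$, then $\widetilde{\varnothing}$ is the top element of ${{\overline{S} }}_{AB}$, since $\widetilde{\varnothing}\sqcap_{{}_{{{\overline{S} }}_{AB}}}\widetilde{v}=\widetilde{\varnothing\cup v}=\widetilde{v}$ by (\ref{tensorinfSAB}), so trivially $\bot_{{}_{\mathfrak{S}_A}}\widetilde{\otimes}\bot_{{}_{\mathfrak{S}_B}}\sqsubseteq_{{}_{{{\overline{S} }}_{AB}}}\widetilde{\varnothing}$. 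Otherwise $u=\{(\sigma_{i,A},\sigma_{i,B})\mid i\in I\}$ with $I\neq\varnothing$ and, by construction and Theorem \ref{theoremaxiomA1bipartitepre}, $\widetilde{u}=\bigsqcap{}^{{}^{{{\overline{S} }}_{AB}}}_{i\in I}\sigma_{i,A}\widetilde{\otimes}\sigma_{i,B}$; by the previous step $\bot_{{}_{\mathfrak{S}_A}}\widetilde{\otimes}\bot_{{}_{\mathfrak{S}_B}}$ is a lower bound of the family $\{\sigma_{i,A}\widetilde{\otimes}\sigma_{i,B}\mid i\in I\}$ and hence is $\sqsubseteq_{{}_{{{\overline{S} }}_{AB}}}$ their infimum $\widetilde{u}$. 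This shows $\bot_{{}_{\mathfrak{S}_A}}\widetilde{\otimes}\bot_{{}_{\mathfrak{S}_B}}$ is the bottom element of ${{\overline{S} }}_{AB}$. The only points requiring care are the bookkeeping of the empty representative and the (easy but necessary) remark that $\bot_{{}_{\mathfrak{S}}}$ really is the least element of the sub-semilattice $\overline{\mathfrak{S}}$, not merely of $\mathfrak{S}$; the central order inequality itself is a one-line monotonicity computation.
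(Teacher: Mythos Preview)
Your proof is correct and follows essentially the same route as the paper, which simply says ``Trivial using the expansion (\ref{developmentetildeordersimplify})'': that expansion, specialized to the singleton $u_{AB}=\{(\bot_{{}_{\mathfrak{S}_A}},\bot_{{}_{\mathfrak{S}_B}})\}$, reduces to $\bot_{{}_{\mathfrak{S}_A}}\sqsubseteq_{{}_{\mathfrak{S}_A}}\sigma_A$ and $\bot_{{}_{\mathfrak{S}_B}}\sqsubseteq_{{}_{\mathfrak{S}_B}}\sigma_B$, which is exactly your pure-tensor step, and then the extension to arbitrary elements via infima is identical. Your version is more self-contained in that it spells out the monotonicity of $\epsilon^{\mathfrak{S}}_{\mathfrak{l}}$ and of $\bullet$ rather than invoking Lemma~\ref{Lemmadevelopetildeleqetilde}, but the underlying argument is the same.
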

\begin{proof}
Trivial using the expansion (\ref{developmentetildeordersimplify}).
\end{proof}

%The following maps are the two announced partial traces :
\begin{theorem}\label{theorempartialtracesminimal}
We have the two following homomorphisms
\begin{eqnarray}
\hspace{-1cm}\begin{array}{rcrclccrcrcl}
&  &\zeta^{\overline{{ \mathfrak{S}}_{A}}\overline{{ \mathfrak{S}}_{B}}}_{(1)} : \;\;\;\;\;\; {{\overline{S} }}_{AB} & \longrightarrow & \overline{{ \mathfrak{S}}_{A}}  & & &  &  &\zeta^{\overline{{ \mathfrak{S}}_{A}}\overline{{ \mathfrak{S}}_{B}}}_{(2)} : \;\;\;\;\;\; {{\overline{S} }}_{AB} & \longrightarrow & \overline{{ \mathfrak{S}}_{B}}  \\
& & \bigsqcap{}^{{}^{{ \overline{S} }_{AB}}}_{i\in I} \sigma_{i,A} \widetilde{\otimes} \sigma_{i,B} & \mapsto & \bigsqcap{}^{{}^{{\overline{{ \mathfrak{S}}_{A}}}}}_{i\in I} \sigma_{i,A}  & & & & & \bigsqcap{}^{{}^{{ \overline{S} }_{AB}}}_{i\in I} \sigma_{i,A} \widetilde{\otimes} \sigma_{i,B} & \mapsto & \bigsqcap{}^{{}^{{\overline{{ \mathfrak{S}}_{B}}}}}_{i\in I}  \sigma_{i,B}
\end{array}
\end{eqnarray}
\end{theorem}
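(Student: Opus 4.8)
The statement to prove is Theorem~\ref{theorempartialtracesminimal}, that the two maps $\zeta^{\overline{{ \mathfrak{S}}_{A}}\overline{{ \mathfrak{S}}_{B}}}_{(1)}$ and $\zeta^{\overline{{ \mathfrak{S}}_{A}}\overline{{ \mathfrak{S}}_{B}}}_{(2)}$ are well-defined homomorphisms of Inf semi-lattices. By symmetry it suffices to treat $\zeta^{\overline{{ \mathfrak{S}}_{A}}\overline{{ \mathfrak{S}}_{B}}}_{(1)}$; the argument for $\zeta^{\overline{{ \mathfrak{S}}_{A}}\overline{{ \mathfrak{S}}_{B}}}_{(2)}$ is verbatim with the roles of $A$ and $B$ exchanged. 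The map is defined on representatives $u_{AB}=\{(\sigma_{i,A},\sigma_{i,B})\;\vert\; i\in I\}$ by $\widetilde{u_{AB}}\mapsto \bigsqcap{}^{{}^{\overline{{ \mathfrak{S}}_{A}}}}_{i\in I}\sigma_{i,A}$, so the first and main thing to check is that this does not depend on the choice of representative, i.e.\ that $u_{AB}\approx u'_{AB}$ implies $\bigsqcap{}^{{}^{\overline{{ \mathfrak{S}}_{A}}}}_{i\in I}\sigma_{i,A}=\bigsqcap{}^{{}^{\overline{{ \mathfrak{S}}_{A}}}}_{j\in J}\sigma'_{j,A}$.

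The key trick for well-definedness is to test the congruence against the specific real effects that ``project onto the $A$ factor''. First I would fix $u_{AB}=\{(\sigma_{i,A},\sigma_{i,B})\;\vert\; i\in I\}$ and compute $\nu\,{}^{\overline{{ \mathfrak{S}}_{A}}\overline{{ \mathfrak{S}}_{B}}}_{{\mathfrak{l}}_A,{ \mathfrak{Y}}_{{}_{{ \mathfrak{E}}_{B}}}}(u_{AB})$: by (\ref{expressionbullet}) we have $x\bullet\textit{\bf Y}=x$ for all $x\in{ \mathfrak{B}}$, hence this equals $\bigwedge{}_{i\in I}\epsilon^{{ \mathfrak{S}}_A}_{{\mathfrak{l}}_A}(\sigma_{i,A})=\epsilon^{{ \mathfrak{S}}_A}_{{\mathfrak{l}}_A}(\bigsqcap{}^{{}^{\overline{{ \mathfrak{S}}_{A}}}}_{i\in I}\sigma_{i,A})$ using (\ref{axiomsigmainfsemilattice}). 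So if $u_{AB}\approx u'_{AB}$ then, evaluating at $({\mathfrak{l}}_A,{ \mathfrak{Y}}_{{}_{{ \mathfrak{E}}_{B}}})$ for every ${\mathfrak{l}}_A\in{ \mathfrak{E}}_A$, we get $\epsilon^{{ \mathfrak{S}}_A}_{{\mathfrak{l}}_A}(\bigsqcap{}^{{}^{\overline{{ \mathfrak{S}}_{A}}}}_{i\in I}\sigma_{i,A})=\epsilon^{{ \mathfrak{S}}_A}_{{\mathfrak{l}}_A}(\bigsqcap{}^{{}^{\overline{{ \mathfrak{S}}_{A}}}}_{j\in J}\sigma'_{j,A})$ for every real effect ${\mathfrak{l}}_A$; the separation property (\ref{realChuseparated}) for the real structure $(\overline{ \mathfrak{S}}_A,\star)$ then forces $\bigsqcap{}^{{}^{\overline{{ \mathfrak{S}}_{A}}}}_{i\in I}\sigma_{i,A}=\bigsqcap{}^{{}^{\overline{{ \mathfrak{S}}_{A}}}}_{j\in J}\sigma'_{j,A}$, establishing well-definedness. (Here I use that the infima are computed in $\overline{ \mathfrak{S}}_A$, which is legitimate because $\overline{ \mathfrak{S}}_A$ is a down-complete Inf semi-lattice.)

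For the homomorphism property, I would use the description of infima in ${{\overline{S} }}_{AB}$ from Theorem~\ref{theoremaxiomA1bipartitepre}: $\bigsqcap{}^{{}^{{{\overline{S} }}_{AB}}}_{k} \widetilde{u_k}=\widetilde{\bigcup_k u_k}$. Given a family $\{\widetilde{u_k}\;\vert\; k\in K\}$ with $u_k=\{(\sigma_{i,A}^{(k)},\sigma_{i,B}^{(k)})\;\vert\; i\in I_k\}$, applying $\zeta^{\overline{{ \mathfrak{S}}_{A}}\overline{{ \mathfrak{S}}_{B}}}_{(1)}$ to $\widetilde{\bigcup_k u_k}$ gives $\bigsqcap{}^{{}^{\overline{{ \mathfrak{S}}_{A}}}}_{k\in K}\bigsqcap{}^{{}^{\overline{{ \mathfrak{S}}_{A}}}}_{i\in I_k}\sigma_{i,A}^{(k)}$, which by associativity of infima in the Inf semi-lattice $\overline{ \mathfrak{S}}_A$ equals $\bigsqcap{}^{{}^{\overline{{ \mathfrak{S}}_{A}}}}_{k\in K}\zeta^{\overline{{ \mathfrak{S}}_{A}}\overline{{ \mathfrak{S}}_{B}}}_{(1)}(\widetilde{u_k})$. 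This is the required identity $\zeta_{(1)}(\bigsqcap\widetilde{u_k})=\bigsqcap\zeta_{(1)}(\widetilde{u_k})$, which is exactly the homomorphism condition (in the sense of (\ref{theorema2})) for maps of down-complete Inf semi-lattices.

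\textbf{Expected main obstacle.} The only genuinely delicate point is the well-definedness step: one must be careful that testing against $({\mathfrak{l}}_A,{ \mathfrak{Y}}_{{}_{{ \mathfrak{E}}_{B}}})$ for \emph{all} ${\mathfrak{l}}_A$ ranging over the real effect space ${ \mathfrak{E}}_A=\overline{ \mathfrak{E}}_{{ \mathfrak{S}}_A}$ really does separate points of $\overline{ \mathfrak{S}}_A$ — this is precisely what axiom (\ref{realChuseparated}) of a real structure guarantees, and it is why the construction forces the use of real effects rather than arbitrary ones. Everything else (the computation with $\bullet$, the infimum formula (\ref{tensorinfSAB}), associativity of $\sqcap$) is routine bookkeeping. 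I would also note in passing that the two maps are genuinely the ``partial traces'': $\zeta_{(1)}(\sigma_A\widetilde{\otimes}\sigma_B)=\sigma_A$, which falls out of the definition with $I$ a singleton.
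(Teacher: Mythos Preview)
Your proof is correct and follows precisely the approach the paper implicitly intends: the paper states Theorem~\ref{theorempartialtracesminimal} without proof, but the key trick you use---testing the congruence against $({\mathfrak{l}}_A,{ \mathfrak{Y}}_{{}_{{ \mathfrak{E}}_{B}}})$ and exploiting $x\bullet\textit{\bf Y}=x$ to collapse onto the $A$-factor---is exactly what the paper already deploys in the proof of Lemma~\ref{Lemmadevelopetildeleqetilde} (see the derivation of~(\ref{ineqBY}) and~(\ref{ineqAY})). The homomorphism part via~(\ref{tensorinfSAB}) is the only natural route and your bookkeeping is sound.
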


\vspace{0.2cm}

\subsection{Minimal tensor product vs. canonical tensor product}\label{subsectionremarkstensorclassical}

In the present subsection we recall some basic elements of the canonical construction for tensor product and we analyze its relation with our own construction of tensor product presented in the last subsection.

\begin{definition}{\bf \cite{Fraser1976}}\label{theoremtensorbasic}
The canonical tensor product denoted ${S}_{AB}:=\overline{{ \mathfrak{S}}_{A}} \otimes \overline{{ \mathfrak{S}}_{B}}$ of the two Inf semi-lattices $\overline{{ \mathfrak{S}}_{A}}$ and $\overline{{ \mathfrak{S}}_{B}}$ is obtained as the solution of the following universal problem : there exists a bi-homomorphism, denoted $\iota$ from $\overline{{ \mathfrak{S}}_{A}} \times \overline{{ \mathfrak{S}}_{B}}$ to ${S}_{AB}$, such that, for any Inf semi-lattice ${ \mathfrak{S}}$ and any bi-homomorphism $f$ from $\overline{{ \mathfrak{S}}_{A}} \times \overline{{ \mathfrak{S}}_{B}}$ to ${ \mathfrak{S}}$,  there is a unique homomorphism $g$ from ${S}_{AB}$ to ${ \mathfrak{S}}$ with $f = g \circ \iota$. We denote $\iota(\sigma,\sigma')=\sigma \otimes \sigma'$ for any $\sigma\in \overline{{ \mathfrak{S}}_{A}}$ and $\sigma'\in \overline{{ \mathfrak{S}}_{B}}$.  \\
The tensor product ${S}_{AB}$ exists and is unique up to isomorphism, it is built as the homomorphic image of the free $\sqcap$ semi-lattice generated by the set $\overline{{ \mathfrak{S}}_{A}} \times \overline{{ \mathfrak{S}}_{B}}$ under the congruence relation determined by identifying $(\sigma_1 \sqcap_{{}_{\overline{{ \mathfrak{S}}_{A}}}}\sigma_2, \sigma')$ with $(\sigma_1,  \sigma')\sqcap (\sigma_2, \sigma')$ for all $\sigma_1,\sigma_2\in \overline{{ \mathfrak{S}}_{A}}, \sigma'\in \overline{{ \mathfrak{S}}_{B}}$ and identifying  $(\sigma, \sigma'_1 \sqcap_{{}_{\overline{{ \mathfrak{S}}_{B}}}}\sigma'_2)$ with $(\sigma,  \sigma'_1)\sqcap (\sigma, \sigma'_2)$ for all $\sigma\in \overline{{ \mathfrak{S}}_{A}},\sigma'_1,\sigma'_2\in \overline{{ \mathfrak{S}}_{B}}$. \\
Then, from now on, ${S}_{AB}$ is the Inf semi-lattice (the infimum of $S\subseteq {S}_{AB}$ will be denoted $\bigsqcap{}^{{}^{{S}_{AB}}} S$) generated by the elements $\sigma_A \otimes \sigma_B$ with $\sigma_A \in \overline{{ \mathfrak{S}}_{A}}, \sigma_B \in \overline{{ \mathfrak{S}}_{B}}$ and subject exclusively to the conditions 
\begin{eqnarray}
&&\left\{\begin{array}{l}
(\sigma_A\sqcap_{{}_{\overline{{ \mathfrak{S}}_{A}}}}\sigma'_A)\otimes \sigma_B = (\sigma_A\otimes \sigma_B)\sqcap_{{}_{{S}_{AB}}}(\sigma'_A\otimes \sigma_B),\\
 \sigma_A \otimes (\sigma_B\sqcap_{{}_{\overline{{ \mathfrak{S}}_{B}}}}\sigma'_B) = (\sigma_A\otimes \sigma_B)\sqcap_{{}_{{S}_{AB}}}(\sigma_A\otimes \sigma'_B). 
\end{array}\right.
\end{eqnarray}
%The elements of ${S}_{AB}$ can be written $(\bigsqcap{}^{{}^{{S}_{AB}}}_{i\in I} \sigma_{i,A}\otimes \sigma_{i,B})$ with $I$ finite and $\sigma_{i,A}\in { \mathfrak{S}}_{A},  \sigma_{i,B}\in { \mathfrak{S}}_{B},$ for any $i\in I$.
The space ${S}_{AB}=\overline{{ \mathfrak{S}}_{A}} \otimes \overline{{ \mathfrak{S}}_{B}}$ is turned into a partially ordered set with the following binary relation
\begin{eqnarray}
\forall \sigma_{AB},\sigma'_{AB} \in {S}_{AB},\;\;\;\; 
(\,\sigma_{AB} \sqsubseteq_{{}_{{S}_{AB}}} \sigma'_{AB}\,) & :\Leftrightarrow &
(\,\sigma_{AB} \sqcap_{{}_{{S}_{AB}}} \sigma'_{AB} = \sigma_{AB}\,).\;\;\;\;\;\;\;\;\;\;\;\;
\end{eqnarray}
\end{definition}

\begin{definition} 
A non-empty subset ${ \mathfrak{R}}$ of $\overline{{ \mathfrak{S}}_{A}} \times \overline{{ \mathfrak{S}}_{B}}$ is called a bi-filter iff 
\begin{eqnarray}
&&\forall \sigma_A,\sigma_{1,A},\sigma_{2,A}\in \overline{{ \mathfrak{S}}_{A}},\forall \sigma_B,\sigma_{1,B},\sigma_{2,B}\in \overline{{ \mathfrak{S}}_{B}},\nonumber\\
&&(\,  (\sigma_{1,A},\sigma_{1,B})\leq (\sigma_{2,A},\sigma_{2,B}) \;\;\textit{\rm and}\;\; (\sigma_{1,A},\sigma_{1,B})\in { \mathfrak{R}} \,)\;\;\Rightarrow\;\; (\sigma_{2,A},\sigma_{2,B})\in { \mathfrak{R}},\;\;\;\;\;\;\;\;\;\;\label{defbifilter1}\\
&&(\sigma_{1,A},\sigma_{B}),(\sigma_{2,A},\sigma_{B})\in { \mathfrak{R}}\;\;\Rightarrow\;\; (\sigma_{1,A}\sqcap_{{}_{\overline{{ \mathfrak{S}}_A}}}\sigma_{2,A},\sigma_{B})\in { \mathfrak{R}},\label{defbifilter2}\\
&&(\sigma_{A},\sigma_{1,B}),(\sigma_{A},\sigma_{2,B})\in { \mathfrak{R}}\;\;\Rightarrow\;\; (\sigma_{A},\sigma_{1,B}\sqcap_{{}_{\overline{{ \mathfrak{S}}_B}}}\sigma_{2,B})\in { \mathfrak{R}}.\label{defbifilter3}
\end{eqnarray}
\end{definition}
\begin{definition}
If $\{(\sigma_{1,A},\sigma_{1,B}),\cdots,(\sigma_{n,A},\sigma_{n,B})\}$ is a non-empty finite subset of $\overline{{ \mathfrak{S}}_{A}} \times \overline{{ \mathfrak{S}}_{B}}$, then the intersection of all bi-filters of $\overline{{ \mathfrak{S}}_{A}} \times \overline{{ \mathfrak{S}}_{B}}$ which contain $(\sigma_{1,A},\sigma_{1,B})$, $\cdots$, $(\sigma_{n,A},\sigma_{n,B})$ is a bi-filter, which we denote by ${ \mathfrak{F}}\{(\sigma_{1,A},\sigma_{1,B}),\cdots,(\sigma_{n,A},\sigma_{n,B})\}$.
\end{definition}

\begin{lemma}\label{FalphaF}
If $F$ is a filter of ${S}_{AB}$ then the set $\alpha(F):=\{\,(\sigma_{A},\sigma_{B})\in \overline{{ \mathfrak{S}}_{A}} \times \overline{{ \mathfrak{S}}_{B}}\;\vert\; \sigma_{A}\otimes \sigma_{B} \in F\;\}$ is a bi-filter of $\overline{{ \mathfrak{S}}_{A}} \times \overline{{ \mathfrak{S}}_{B}}$.
\end{lemma}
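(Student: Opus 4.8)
The plan is to verify directly that $\alpha(F)$ satisfies the three defining conditions of a bi-filter, using only the fact that $F$ is a filter of the semilattice ${S}_{AB}$ (i.e.\ $F$ is non-empty, upward closed, and closed under finite infima) together with the two bi-homomorphic identities relating $\otimes$ to $\sqcap_{{}_{\overline{{ \mathfrak{S}}_A}}}$ and $\sqcap_{{}_{\overline{{ \mathfrak{S}}_B}}}$.

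First I would check non-emptiness: since $F$ is a filter it is non-empty, so it contains some element $\sigma_{AB}$; since ${S}_{AB}$ is generated by the pure tensors under $\sqcap$, $\sigma_{AB}$ is a finite infimum $\bigsqcap_{k} (\sigma_{k,A}\otimes\sigma_{k,B})$, which is $\sqsubseteq_{{}_{{S}_{AB}}}\sigma_{1,A}\otimes\sigma_{1,B}$, and upward closure of $F$ then gives $\sigma_{1,A}\otimes\sigma_{1,B}\in F$, so $(\sigma_{1,A},\sigma_{1,B})\in\alpha(F)$. Next, for condition (\ref{defbifilter1}): if $(\sigma_{1,A},\sigma_{1,B})\leq(\sigma_{2,A},\sigma_{2,B})$ in the product order and $\sigma_{1,A}\otimes\sigma_{1,B}\in F$, then monotonicity of $\otimes$ in each argument (itself a consequence of the two $\sqcap$-identities) gives $\sigma_{1,A}\otimes\sigma_{1,B}\sqsubseteq_{{}_{{S}_{AB}}}\sigma_{2,A}\otimes\sigma_{2,B}$, and upward closure of $F$ yields $(\sigma_{2,A},\sigma_{2,B})\in\alpha(F)$.

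For conditions (\ref{defbifilter2}) and (\ref{defbifilter3}), which are symmetric, I would treat (\ref{defbifilter2}): suppose $(\sigma_{1,A},\sigma_B)$ and $(\sigma_{2,A},\sigma_B)$ lie in $\alpha(F)$, i.e.\ $\sigma_{1,A}\otimes\sigma_B\in F$ and $\sigma_{2,A}\otimes\sigma_B\in F$. Since $F$ is closed under finite infima, $(\sigma_{1,A}\otimes\sigma_B)\sqcap_{{}_{{S}_{AB}}}(\sigma_{2,A}\otimes\sigma_B)\in F$; but by the first bi-homomorphic identity in Definition \ref{theoremtensorbasic} this infimum equals $(\sigma_{1,A}\sqcap_{{}_{\overline{{ \mathfrak{S}}_A}}}\sigma_{2,A})\otimes\sigma_B$, hence $(\sigma_{1,A}\sqcap_{{}_{\overline{{ \mathfrak{S}}_A}}}\sigma_{2,A},\sigma_B)\in\alpha(F)$. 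The argument for (\ref{defbifilter3}) is identical with the roles of the two factors swapped, using the second identity.

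The routine parts here are genuinely routine; the only place that needs a moment of care is establishing monotonicity of $\otimes$ in each argument from the $\sqcap$-identities (one observes $(\sigma_{1,A}\sqcap_{{}_{\overline{{ \mathfrak{S}}_A}}}\sigma_{2,A})=\sigma_{1,A}$ whenever $\sigma_{1,A}\sqsubseteq_{{}_{\overline{{ \mathfrak{S}}_A}}}\sigma_{2,A}$, so $(\sigma_{1,A}\otimes\sigma_B)\sqcap_{{}_{{S}_{AB}}}(\sigma_{2,A}\otimes\sigma_B)=\sigma_{1,A}\otimes\sigma_B$, which is exactly $\sigma_{1,A}\otimes\sigma_B\sqsubseteq_{{}_{{S}_{AB}}}\sigma_{2,A}\otimes\sigma_B$), and the analogous fact for the second factor. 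I do not anticipate any real obstacle: the whole proof is a direct unwinding of definitions, with the two bi-homomorphic relations doing all the work in the closure-under-infima step.
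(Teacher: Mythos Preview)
Your proof is correct and follows the natural direct verification of the bi-filter axioms; the paper states this lemma without proof (it is presented as a routine fact in the spirit of Fraser's cited work), so your approach is exactly what is implicitly expected.
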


\begin{lemma}{\bf \cite[Lemma 1]{Fraser1978}}\label{sigmainFinequality}
Let us choose $\sigma_A,\sigma_{1,A},\cdots,\sigma_{n,A}\in \overline{{ \mathfrak{S}}_{A}}$ and $\sigma_B,\sigma_{1,B},\cdots,\sigma_{n,B}\in \overline{{ \mathfrak{S}}_{B}}$. Then, 
\begin{eqnarray}
\hspace{-1cm}(\sigma_{A},\sigma_{B})\in { \mathfrak{F}}\{(\sigma_{1,A},\sigma_{1,B}),\cdots,(\sigma_{n,A},\sigma_{n,B})\} & \Leftrightarrow & \left( \bigsqcap{}^{{}^{{S}_{AB}}}_{1\leq i\leq n}\; \sigma_{i,A}\otimes \sigma_{i,B}\right) \sqsubseteq_{{}_{{S}_{AB}}} \sigma_{A}\otimes \sigma_{B}.\;\;\;\;\;\;\;\;\;\;\;\;\;\;
\end{eqnarray}
\end{lemma}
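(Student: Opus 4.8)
\textbf{Proof plan for Lemma \ref{sigmainFinequality}.}

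The statement is Fraser's characterization of principal bi-filters in the tensor product $S_{AB}$, and the natural route is to exhibit both implications by relating the lattice-theoretic order in $S_{AB}$ to bi-filter membership. First I would establish the ($\Leftarrow$) direction, which is the structural heart. Given that $\left(\bigsqcap{}^{{}^{S_{AB}}}_{1\leq i\leq n} \sigma_{i,A}\otimes \sigma_{i,B}\right) \sqsubseteq_{{}_{S_{AB}}} \sigma_A\otimes\sigma_B$, I want to show $(\sigma_A,\sigma_B)\in {\mathfrak{F}}\{(\sigma_{1,A},\sigma_{1,B}),\dots,(\sigma_{n,A},\sigma_{n,B})\}$. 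The idea is to use the universal property of the canonical tensor product (Definition \ref{theoremtensorbasic}): consider the principal filter $F$ of $S_{AB}$ generated by $\bigsqcap{}^{{}^{S_{AB}}}_{i} \sigma_{i,A}\otimes\sigma_{i,B}$; by Lemma \ref{FalphaF} the set $\alpha(F)$ is a bi-filter of $\overline{{ \mathfrak{S}}_{A}}\times\overline{{ \mathfrak{S}}_{B}}$, and it obviously contains each $(\sigma_{i,A},\sigma_{i,B})$, hence contains the generated bi-filter ${\mathfrak{F}}\{(\sigma_{1,A},\sigma_{1,B}),\dots\}$. Conversely, since by hypothesis $\sigma_A\otimes\sigma_B\in F$, we get $(\sigma_A,\sigma_B)\in\alpha(F)$. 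To close the loop one must show $\alpha(F)\subseteq {\mathfrak{F}}\{(\sigma_{1,A},\sigma_{1,B}),\dots\}$, i.e.\ that $\alpha(F)$ is \emph{exactly} the generated bi-filter and not something larger. This requires a normal-form / representation argument: every element of $S_{AB}$ is a finite infimum of pure tensors (from the presentation in Definition \ref{theoremtensorbasic}), and one analyzes when such an infimum lies above $\sigma_A\otimes\sigma_B$, reducing to membership conditions already known for the bi-filter through the defining closure properties (\ref{defbifilter1})--(\ref{defbifilter3}).

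For the ($\Rightarrow$) direction I would argue more directly. Assume $(\sigma_A,\sigma_B)\in {\mathfrak{F}}\{(\sigma_{1,A},\sigma_{1,B}),\dots,(\sigma_{n,A},\sigma_{n,B})\}$. Define the set $R:=\{(\tau_A,\tau_B)\in\overline{{ \mathfrak{S}}_{A}}\times\overline{{ \mathfrak{S}}_{B}}\;\vert\; \left(\bigsqcap{}^{{}^{S_{AB}}}_{1\leq i\leq n}\sigma_{i,A}\otimes\sigma_{i,B}\right)\sqsubseteq_{{}_{S_{AB}}}\tau_A\otimes\tau_B\}$. I would check that $R$ is a bi-filter: upward closure is immediate from monotonicity of $\otimes$ and of $\sqcap$; the two meet-closure conditions (\ref{defbifilter2}), (\ref{defbifilter3}) follow from the defining bi-homomorphism identities $(\tau_A\sqcap_{{}_{\overline{{ \mathfrak{S}}_{A}}}}\tau'_A)\otimes\tau_B=(\tau_A\otimes\tau_B)\sqcap_{{}_{S_{AB}}}(\tau'_A\otimes\tau_B)$ and its mirror, since if $\tau_A\otimes\tau_B$ and $\tau'_A\otimes\tau_B$ both lie above the fixed infimum then so does their meet, which is exactly $(\tau_A\sqcap_{{}_{\overline{{ \mathfrak{S}}_{A}}}}\tau'_A)\otimes\tau_B$. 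Moreover $R$ contains each $(\sigma_{i,A},\sigma_{i,B})$, because $\bigsqcap{}^{{}^{S_{AB}}}_{j}\sigma_{j,A}\otimes\sigma_{j,B}\sqsubseteq_{{}_{S_{AB}}}\sigma_{i,A}\otimes\sigma_{i,B}$ trivially. Hence $R\supseteq {\mathfrak{F}}\{(\sigma_{1,A},\sigma_{1,B}),\dots,(\sigma_{n,A},\sigma_{n,B})\}$, so $(\sigma_A,\sigma_B)\in R$, which is precisely $\left(\bigsqcap{}^{{}^{S_{AB}}}_{1\leq i\leq n}\sigma_{i,A}\otimes\sigma_{i,B}\right)\sqsubseteq_{{}_{S_{AB}}}\sigma_A\otimes\sigma_B$.

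The main obstacle is the ($\Leftarrow$) direction, specifically proving that $\alpha(F)$ is no larger than the generated bi-filter. This is where Fraser's original argument does real work, and it hinges on having a usable normal form for elements of $S_{AB}$ together with a lemma describing exactly when a finite infimum of pure tensors dominates a single pure tensor $\sigma_A\otimes\sigma_B$ — essentially an analogue of the expansion in Lemma \ref{Lemmadevelopetildeleqetilde} but for the \emph{canonical} rather than the \emph{minimal} tensor product. Once that dominance criterion is in hand, one reads off that any $(\tau_A,\tau_B)$ with $\sigma_A\otimes\sigma_B\in F$ — equivalently $\bigsqcap{}^{{}^{S_{AB}}}_i\sigma_{i,A}\otimes\sigma_{i,B}\sqsubseteq_{{}_{S_{AB}}}\tau_A\otimes\tau_B$ — can be produced from the generators $(\sigma_{i,A},\sigma_{i,B})$ by finitely many applications of the bi-filter operations (\ref{defbifilter1})--(\ref{defbifilter3}). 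I would therefore organize the proof so that both inclusions $R={\mathfrak{F}}\{(\sigma_{1,A},\sigma_{1,B}),\dots\}$ are obtained, with the nontrivial one supplied by the canonical normal-form analysis, and then the biconditional of the lemma is immediate by unfolding the definition of $R$.
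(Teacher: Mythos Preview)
Your $(\Rightarrow)$ direction is clean and correct: the set $R$ is exactly $\alpha(F)$ for $F$ the principal filter above $\bigsqcap_i \sigma_{i,A}\otimes\sigma_{i,B}$, and showing it is a bi-filter containing the generators immediately gives $\mathfrak{F}\{(\sigma_{i,A},\sigma_{i,B})\}\subseteq R$, hence the inequality.

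The $(\Leftarrow)$ direction, however, has a real gap. You correctly identify that you need $\alpha(F)\subseteq \mathfrak{F}\{(\sigma_{i,A},\sigma_{i,B})\}$, but your proposed route---a normal-form analysis of when $\bigsqcap_i \sigma_{i,A}\otimes\sigma_{i,B}\sqsubseteq_{{}_{S_{AB}}}\tau_A\otimes\tau_B$---is essentially the content of Lemma~\ref{orderimpliespolynomial} (Fraser's Theorem~1), which in the paper's logical order comes \emph{after} the present lemma and in fact depends on it. So the workaround you sketch is circular within this development, and you have not indicated an independent way to obtain such a dominance criterion for the canonical tensor product.

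The standard fix (and Fraser's) bypasses normal forms entirely by using the universal property in Definition~\ref{theoremtensorbasic}. Take $G:=\mathfrak{F}\{(\sigma_{i,A},\sigma_{i,B})\}$ and define $f:\overline{\mathfrak{S}_A}\times\overline{\mathfrak{S}_B}\to\mathbf{2}$ (the two-element meet-semilattice with top $1$) by $f(a,b)=1$ iff $(a,b)\in G$. The bi-filter axioms (\ref{defbifilter1})--(\ref{defbifilter3}) say precisely that $f$ is a bi-homomorphism, so there is a unique homomorphism $g:S_{AB}\to\mathbf{2}$ with $g(a\otimes b)=f(a,b)$. Since each $(\sigma_{i,A},\sigma_{i,B})\in G$, we get $g(\sigma_{i,A}\otimes\sigma_{i,B})=1$, hence $g\!\left(\bigsqcap_i\sigma_{i,A}\otimes\sigma_{i,B}\right)=1$; monotonicity of $g$ and the hypothesis then force $g(\sigma_A\otimes\sigma_B)=1$, i.e.\ $(\sigma_A,\sigma_B)\in G$. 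This closes $(\Leftarrow)$ in two lines, and incidentally shows that $\alpha(F)=\mathfrak{F}\{(\sigma_{i,A},\sigma_{i,B})\}$ without any appeal to representations of elements of $S_{AB}$. (The paper itself does not reproduce a proof and simply cites Fraser.)
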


\begin{lemma}{\bf \cite[Theorem 1]{Fraser1978}}\label{orderimpliespolynomial}\\
Let us choose $\sigma_A,\sigma_{1,A},\cdots,\sigma_{n,A}\in \overline{{ \mathfrak{S}}_{A}}$ and $\sigma_B,\sigma_{1,B},\cdots,\sigma_{n,B}\in \overline{{ \mathfrak{S}}_{B}}$. Then, 
\begin{eqnarray}
&&\hspace{-2cm} \left( \bigsqcap{}^{{}^{{S}_{AB}}}_{1\leq i\leq n}\; \sigma_{i,A}\otimes \sigma_{i,B}\right) \sqsubseteq_{{}_{{S}_{AB}}} \sigma_{A}\otimes \sigma_{B}  \Leftrightarrow \nonumber\\  
&& (\,\textit{\rm there exists a n$-$ary lattice polynomial $p$}\;\vert\; \sigma_{A}\sqsupseteq_{{}_{\overline{{ \mathfrak{S}}_A}}} p(\sigma_{1,A},\cdots,\sigma_{n,A})\nonumber\\
&&\textit{\rm and}\;\;  \sigma_{B}\sqsupseteq_{{}_{\overline{{ \mathfrak{S}}_B}}} p^\ast(\sigma_{1,B},\cdots,\sigma_{n,B}) \,).
\end{eqnarray}
where $p^\ast$ denotes the lattice polynomial obtained from $p$ by dualizing the lattice operations.
\end{lemma}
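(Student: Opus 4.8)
The statement to prove is Fraser's Theorem~1 (Lemma~\ref{orderimpliespolynomial}), which characterizes the order relation in the canonical tensor product $S_{AB}$ in terms of lattice polynomials. Since this is a cited result, my plan is to sketch the structure of the argument in the way Fraser originally carried it out, adapted to the present notation.

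\bigskip

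\noindent\textbf{Plan of proof.} The plan is to prove the two implications separately, with the $(\Leftarrow)$ direction being the easy one and the $(\Rightarrow)$ direction requiring the combinatorial heart of the argument.

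For $(\Leftarrow)$: suppose there is an $n$-ary lattice polynomial $p$ with $\sigma_A \sqsupseteq_{{}_{\overline{{ \mathfrak{S}}_A}}} p(\sigma_{1,A},\dots,\sigma_{n,A})$ and $\sigma_B \sqsupseteq_{{}_{\overline{{ \mathfrak{S}}_B}}} p^\ast(\sigma_{1,B},\dots,\sigma_{n,B})$. First I would observe that it suffices to treat the case $\sigma_A = p(\vec{\sigma}_A)$ and $\sigma_B = p^\ast(\vec\sigma_B)$, since enlarging $\sigma_A,\sigma_B$ only raises the right-hand side of the target inequality in $S_{AB}$ (monotonicity of $\sqcap_{{}_{{S}_{AB}}}$, hence of $\sqsubseteq_{{}_{{S}_{AB}}}$). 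Then I would argue by structural induction on the polynomial $p$. The base case $p = x_i$ gives $p^\ast = x_i$ and the inequality $\bigsqcap_{1\le j\le n} \sigma_{j,A}\otimes\sigma_{j,B} \sqsubseteq_{{}_{{S}_{AB}}} \sigma_{i,A}\otimes\sigma_{i,B}$ is immediate. For the inductive step, if $p = q \wedge r$ then $p^\ast = q^\ast \vee r^\ast$; using the defining bi-homomorphism relations of $\otimes$ together with the fact that $\iota$ is a bi-homomorphism, one expands $(q\wedge r)(\vec\sigma_A) \otimes (q^\ast\vee r^\ast)(\vec\sigma_B)$ and checks it lies above $\bigsqcap_j \sigma_{j,A}\otimes\sigma_{j,B}$, and dually for $p = q\vee r$. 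The key identity needed is that, for elements $a,a'$ and $b,b'$ one has $(a\sqcap a')\otimes(b\sqcup b') \sqsupseteq (a\otimes b)\sqcap(a'\otimes b')$ --- which follows by expanding via bilinearity and discarding cross terms --- combined with the inductive hypotheses applied to $q$ and $r$.

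For $(\Rightarrow)$: by Lemma~\ref{sigmainFinequality}, the inequality $\bigsqcap_{1\le i\le n}\sigma_{i,A}\otimes\sigma_{i,B} \sqsubseteq_{{}_{{S}_{AB}}} \sigma_A\otimes\sigma_B$ is equivalent to $(\sigma_A,\sigma_B) \in { \mathfrak{F}}\{(\sigma_{1,A},\sigma_{1,B}),\dots,(\sigma_{n,A},\sigma_{n,B})\}$, i.e. $(\sigma_A,\sigma_B)$ lies in the smallest bi-filter containing the $n$ generating pairs. So the plan is to give a concrete description of this bi-filter and show every pair in it is dominated (componentwise) by a pair of the form $(p(\vec\sigma_A), p^\ast(\vec\sigma_B))$. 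I would proceed by showing that the set
\begin{eqnarray*}
{ \mathfrak{G}} &:=& \{\,(\tau_A,\tau_B) \;\vert\; \exists\ \textit{$n$-ary lattice polynomial } p,\ \tau_A\sqsupseteq_{{}_{\overline{{ \mathfrak{S}}_A}}} p(\vec\sigma_A)\ \textit{and}\ \tau_B\sqsupseteq_{{}_{\overline{{ \mathfrak{S}}_B}}} p^\ast(\vec\sigma_B)\,\}
\end{eqnarray*}
is itself a bi-filter containing all the generators $(\sigma_{i,A},\sigma_{i,B})$ (take $p = x_i$). Since ${ \mathfrak{F}}\{\dots\}$ is the \emph{smallest} such bi-filter, ${ \mathfrak{F}}\{\dots\} \subseteq { \mathfrak{G}}$, and in particular $(\sigma_A,\sigma_B)\in{ \mathfrak{G}}$, which is exactly the desired conclusion. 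Checking ${ \mathfrak{G}}$ is a bi-filter requires verifying the three closure conditions \eqref{defbifilter1}--\eqref{defbifilter3}: upward closure is obvious; for closure under meeting the first component (keeping the second fixed), given $(\tau_{1,A},\tau_B)$ witnessed by $p_1$ and $(\tau_{2,A},\tau_B)$ witnessed by $p_2$, one forms the polynomial $p := p_1\vee p_2$, so that $p(\vec\sigma_A) = p_1(\vec\sigma_A)\vee p_2(\vec\sigma_A) \sqsubseteq \tau_{1,A}\sqcap\tau_{2,A}$ --- wait, that direction is wrong; rather one wants $p$ with $p(\vec\sigma_A)\sqsubseteq \tau_{1,A}\sqcap_{{}_{\overline{{ \mathfrak{S}}_A}}}\tau_{2,A}$ and $p^\ast(\vec\sigma_B)\sqsubseteq\tau_B$, so one takes $p := p_1 \wedge p_2$, giving $p(\vec\sigma_A) = p_1(\vec\sigma_A)\wedge p_2(\vec\sigma_A) \sqsubseteq \tau_{1,A}\sqcap\tau_{2,A}$ and $p^\ast = p_1^\ast\vee p_2^\ast$ with $p^\ast(\vec\sigma_B) = p_1^\ast(\vec\sigma_B)\vee p_2^\ast(\vec\sigma_B) \sqsubseteq \tau_B$ since both $p_i^\ast(\vec\sigma_B)\sqsubseteq\tau_B$. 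The third condition is handled dually, using $p := p_1\vee p_2$.

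\bigskip

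\noindent\textbf{Main obstacle.} The delicate point is entirely in the $(\Rightarrow)$ direction, and specifically in the verification that ${ \mathfrak{G}}$ is closed under the bi-filter operations while simultaneously tracking how the dual polynomial $p^\ast$ transforms --- one must be careful that dualizing commutes correctly with the combinations ($(p_1\wedge p_2)^\ast = p_1^\ast\vee p_2^\ast$) and that the inequalities flow in the right direction in each component (the $A$-component and $B$-component move oppositely under dualization, which is exactly why the bi-filter closure conditions pair a meet in one coordinate with compatibility in the other). A secondary subtlety is making sure the polynomials stay $n$-ary (on the fixed variable set $x_1,\dots,x_n$) throughout, which is automatic but worth stating. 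Everything else reduces to the already-established Lemmas~\ref{FalphaF} and~\ref{sigmainFinequality} and the universal/defining properties of $\otimes$ from Definition~\ref{theoremtensorbasic}.
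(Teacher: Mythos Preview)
Your proof is correct and is precisely Fraser's original argument; the paper itself does not prove this lemma but merely cites \cite[Theorem~1]{Fraser1978}. The only cosmetic issue is the mid-proof self-correction when verifying condition~\eqref{defbifilter2} for ${\mathfrak{G}}$---you should clean that up---but the final choice $p:=p_1\wedge p_2$ (and dually $p:=p_1\vee p_2$ for \eqref{defbifilter3}) is right, and the existence of the required joins in $p^\ast(\vec\sigma_B)$ follows because $\tau_B$ is a common upper bound and $\overline{{\mathfrak{S}}_B}$ is down-complete.
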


\begin{definition}
We denote by ${\overline{S} }{}^{fin}_{AB}$ the sub-poset of ${\overline{S} }_{AB}$ formed by finite infima of pure tensors. 
It is also a sub- Inf semi-lattice of ${\overline{S} }_{AB}$.
\end{definition}

\begin{definition}
We denote by
\begin{eqnarray}
\widetilde{ \mathfrak{F}}\{(\sigma_{i,A},\sigma_{i,B})\;\vert\; i\in I\}&:=&\{\,(\sigma'_A,\sigma'_B)\;\vert\; (\bigsqcap{}^{{}^{{\overline{S} }_{AB}}}_{i\in I} \sigma_{i,A}\widetilde{\otimes} \sigma_{i,B}) \sqsubseteq_{{}_{{\overline{S} }_{AB}}}   \sigma'_{A}\widetilde{\otimes} \sigma'_{B}\}\nonumber\\ 
%&=&\langle \bigsqcap{}^{{}^{{\overline{S} }_{AB}}}_{i\in I} \sigma_{i,A}\widetilde{\otimes} \sigma_{i,B} \rangle. \;\;\;\;\;\;\;\;\;\;
\end{eqnarray}
\end{definition}

\begin{theorem} \label{theoremsqsubseteqPABgeqSAB} 
We have the following obvious property relating the partial orders of ${{\overline{S} }}{}^{fin}_{AB}$ and ${S}_{AB}$. For any $\{(\sigma_{i,A},\sigma_{i,B})\;\vert\; i\in I\}\subseteq_{fin} S_{AB}$,
\begin{eqnarray}
(\bigsqcap{}^{{}^{{S}_{AB}}}_{i\in I} \sigma_{i,A}\otimes \sigma_{i,B}) \sqsubseteq_{{}_{{S}_{AB}}}  \sigma'_{A}\otimes \sigma'_{B}  & \Rightarrow & (\bigsqcap{}^{{}^{{\overline{S} }_{AB}}}_{i\in I} \sigma_{i,A}\widetilde{\otimes} \sigma_{i,B}) \sqsubseteq_{{}_{{\overline{S} }_{AB}}}   \sigma'_{A}\widetilde{\otimes} \sigma'_{B}.\;\;\;\;\;\;\;\;\;\;\;
\end{eqnarray}
\end{theorem}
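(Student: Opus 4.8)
The plan is to realise ${{\overline{S} }}_{AB}$ as a homomorphic image of the canonical tensor product ${S}_{AB}$ and then to transport the assumed inequality forward along the corresponding homomorphism of Inf semi-lattices.

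First I would note that, by Theorem \ref{theoremaxiomA1bipartitepre}, ${{\overline{S} }}_{AB}$ is an Inf semi-lattice, and that the map $(\sigma_A,\sigma_B)\mapsto \sigma_A\widetilde{\otimes}\sigma_B$ from $\overline{{ \mathfrak{S}}_{A}}\times \overline{{ \mathfrak{S}}_{B}}$ to ${{\overline{S} }}_{AB}$ is a bi-homomorphism; this is exactly the content of the identities (\ref{pitensor=tensorpi1preminimal}) and (\ref{pitensor=tensorpi2preminimal}). Invoking the universal property that characterises ${S}_{AB}$ (Definition \ref{theoremtensorbasic}), applied to the target Inf semi-lattice ${ \mathfrak{S}}:={{\overline{S} }}_{AB}$ and the bi-homomorphism $f:=\iota^{{{\overline{S}}}_{AB}}$, I would obtain a unique homomorphism $g:{S}_{AB}\longrightarrow {{\overline{S} }}_{AB}$ with $g(\sigma_A\otimes \sigma_B)=\sigma_A\widetilde{\otimes}\sigma_B$ for all $\sigma_A\in \overline{{ \mathfrak{S}}_{A}}$ and $\sigma_B\in \overline{{ \mathfrak{S}}_{B}}$.

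The conclusion is then immediate. A homomorphism of Inf semi-lattices is monotone (from $x\sqcap_{{}_{{S}_{AB}}} y=x$ one reads off $g(x)\sqcap_{{}_{{{\overline{S} }}_{AB}}} g(y)=g(x)$), and it commutes with all finite infima; since the index set $I$ is finite, applying $g$ to the hypothesis $\bigsqcap{}^{{}^{{S}_{AB}}}_{i\in I}\sigma_{i,A}\otimes \sigma_{i,B}\sqsubseteq_{{}_{{S}_{AB}}}\sigma'_A\otimes \sigma'_B$ yields $\bigsqcap{}^{{}^{{{\overline{S} }}_{AB}}}_{i\in I}\sigma_{i,A}\widetilde{\otimes}\sigma_{i,B}\sqsubseteq_{{}_{{{\overline{S} }}_{AB}}}\sigma'_A\widetilde{\otimes}\sigma'_B$, which is the claim.

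I do not expect any real obstacle here — this is the ``obvious'' direction, the point being simply that ${{\overline{S} }}_{AB}$ is a quotient of ${S}_{AB}$, whereas the converse inequality of orders typically fails. The only items needing a line of justification are that the target of the universal map is indeed an Inf semi-lattice (Theorem \ref{theoremaxiomA1bipartitepre}) and that $g$ preserves the infimum over the finite set $I$. As an alternative route avoiding the universal property, one can argue directly from Lemma \ref{orderimpliespolynomial}: the hypothesis furnishes a lattice polynomial $p$ with $\sigma'_A\sqsupseteq_{{}_{\overline{{ \mathfrak{S}}_A}}}p(\sigma_{1,A},\dots,\sigma_{n,A})$ and $\sigma'_B\sqsupseteq_{{}_{\overline{{ \mathfrak{S}}_B}}}p^\ast(\sigma_{1,B},\dots,\sigma_{n,B})$, and since $\widetilde{\otimes}$ obeys the same bimorphic rules (\ref{pitensor=tensorpi1preminimal})--(\ref{pitensor=tensorpi2preminimal}) used to evaluate $p$, the very same polynomial witnesses $\bigsqcap{}^{{}^{{{\overline{S} }}_{AB}}}_{i}\sigma_{i,A}\widetilde{\otimes}\sigma_{i,B}\sqsubseteq_{{}_{{{\overline{S} }}_{AB}}}\sigma'_A\widetilde{\otimes}\sigma'_B$, using Lemma \ref{Lemmadevelopetildeleqetilde} (or directly the construction of ${{\overline{S} }}_{AB}$).
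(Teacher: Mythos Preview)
Your main argument via the universal property is correct and complete: $\overline{S}_{AB}$ is an Inf semi-lattice, $\iota^{\overline{S}_{AB}}$ is a bi-homomorphism by (\ref{pitensor=tensorpi1preminimal})--(\ref{pitensor=tensorpi2preminimal}), so Definition~\ref{theoremtensorbasic} furnishes a homomorphism $g\colon S_{AB}\to\overline{S}_{AB}$ sending $\sigma_A\otimes\sigma_B$ to $\sigma_A\widetilde{\otimes}\sigma_B$, and monotonicity of $g$ together with preservation of finite infima transports the inequality.

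The paper takes a different, more explicit route. Rather than invoking the universal property abstractly, it works with the bi-filter characterisation of the order on $S_{AB}$ (Lemma~\ref{sigmainFinequality}): it verifies directly that $\widetilde{\mathfrak{F}}\{(\sigma_{i,A},\sigma_{i,B})\mid i\in I\}$ is a bi-filter containing each generator $(\sigma_{k,A},\sigma_{k,B})$, hence contains the minimal such bi-filter $\mathfrak{F}\{(\sigma_{i,A},\sigma_{i,B})\mid i\in I\}$, and then reads off the implication from Lemma~\ref{sigmainFinequality}. Your approach is cleaner and more conceptual---it makes transparent that the result is nothing but the statement that $\overline{S}_{AB}$ is a quotient of $S_{AB}$---while the paper's argument is self-contained at the level of the concrete order descriptions and dovetails with the bi-filter machinery used again in the proof of Theorem~\ref{theoremsqsubseteqPAB=SAB}. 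Your alternative via Lemma~\ref{orderimpliespolynomial} is in the right spirit but would need more work to close: one has to argue that the polynomial witness there can indeed be pushed through to the explicit condition of Lemma~\ref{Lemmadevelopetildeleqetilde}, which is not entirely immediate since $p$ involves suprema that need not exist in the factors.
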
%%%%
\begin{proof}
First of all, it is clear that $\widetilde{ \mathfrak{F}}\{(\sigma_{i,A},\sigma_{i,B})\;\vert\; i\in I\}$ is a bi-filter.\\ 
Secondly, it is easy to check that $(\sigma_{k,A},\sigma_{k,B})\in \widetilde{ \mathfrak{F}}\{(\sigma_{i,A},\sigma_{i,B})\;\vert\; i\in I\}$ for any $k\in I$. Indeed, for any $K\subseteq I$, if $k\in K$ we have $(\bigsqcap{}^{{}_{\overline{{ \mathfrak{S}}_{A}}}}_{l\in K}\; \sigma_{l,A}) \;\sqsubseteq_{{}_{\overline{{ \mathfrak{S}}_{A}}}}\sigma_{k,A} $ and if $k\notin K$ we have $(\bigsqcap{}^{{}_{\overline{{ \mathfrak{S}}_{B}}}}_{m\in I-K} \;\sigma_{m,B})\; \sqsubseteq_{{}_{\overline{{ \mathfrak{S}}_{B}}}} \sigma_{k,B}$. \\
As a conclusion,  and by definition of ${ \mathfrak{F}}\{(\sigma_{i,A},\sigma_{i,B})\;\vert\; i\in I\}$ as the intersection of all bi-filters containing $(\sigma_{i,A},\sigma_{i,B})$ for any $i\in I$, we have then 
$\widetilde{ \mathfrak{F}}\{(\sigma_{i,A},\sigma_{i,B})\;\vert\; i\in I\} \supseteq  { \mathfrak{F}}\{(\sigma_{i,A},\sigma_{i,B})\;\vert\; i\in I\}$.\\
We now use Lemma \ref{sigmainFinequality} to obtain the announced result.
\end{proof}

\begin{theorem} \label{theoremsqsubseteqPAB=SAB}
If $\overline{{ \mathfrak{S}}_{A}}$ \underline{or} $\overline{{ \mathfrak{S}}_{B}}$ are distributive (for example as simplex spaces of states), then ${{\overline{S} }}{}^{fin}_{AB}$ and ${S}_{AB}$ are isomorphic posets.  
\end{theorem}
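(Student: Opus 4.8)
\textit{Plan of proof.} The idea is to write down the evident surjection of $S_{AB}$ onto ${\overline{S} }{}^{fin}_{AB}$ and to show it reflects the order; the one substantive point is the converse of Theorem \ref{theoremsqsubseteqPABgeqSAB}, and that is exactly where distributivity of one factor is used. The bimorphic identities (\ref{pitensor=tensorpi1preminimal})--(\ref{pitensor=tensorpi2preminimal}) say that $(\sigma_A,\sigma_B)\mapsto\sigma_A\widetilde{\otimes}\sigma_B$ is a bi-homomorphism from $\overline{{ \mathfrak{S}}_{A}}\times\overline{{ \mathfrak{S}}_{B}}$ into the Inf semi-lattice ${\overline{S} }_{AB}$, so the universal property of Definition \ref{theoremtensorbasic} provides a unique Inf semi-lattice homomorphism $g\colon S_{AB}\to{\overline{S} }_{AB}$ with $g(\sigma_A\otimes\sigma_B)=\sigma_A\widetilde{\otimes}\sigma_B$; since every element of $S_{AB}$ is a finite infimum of pure tensors, the image of $g$ is precisely ${\overline{S} }{}^{fin}_{AB}$. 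Thus $g$ is a surjective, order-preserving homomorphism onto ${\overline{S} }{}^{fin}_{AB}$, and as such a map is a poset isomorphism onto its image exactly when it reflects the order, it suffices to prove that $g(x)\sqsubseteq_{{}_{{\overline{S} }_{AB}}}g(y)$ implies $x\sqsubseteq_{{}_{S_{AB}}}y$. Writing $x$ and $y$ as finite infima of pure tensors and using that in $S_{AB}$ as well as in ${\overline{S} }_{AB}$ being below an infimum of pure tensors means being below each of them, I would reduce this to the case where $y=\sigma'_{A}\otimes\sigma'_{B}$ is a single pure tensor.

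In that case the two explicit order descriptions are combined. By Lemma \ref{Lemmadevelopetildeleqetilde}, the hypothesis $\bigsqcap{}^{{}^{{\overline{S} }_{AB}}}_{i\in I}\sigma_{i,A}\widetilde{\otimes}\sigma_{i,B}\sqsubseteq_{{}_{{\overline{S} }_{AB}}}\sigma'_{A}\widetilde{\otimes}\sigma'_{B}$ (with $I$ finite) is equivalent to the following combinatorial data: letting $\mathcal{A}$ be the family of nonempty $K\subseteq I$ with $\bigsqcap{}^{{}^{\overline{{ \mathfrak{S}}_{A}}}}_{k\in K}\sigma_{k,A}\sqsubseteq_{{}_{\overline{{ \mathfrak{S}}_{A}}}}\sigma'_{A}$ and $\mathcal{B}$ the family of nonempty $L\subseteq I$ with $\bigsqcap{}^{{}^{\overline{{ \mathfrak{S}}_{B}}}}_{m\in L}\sigma_{m,B}\sqsubseteq_{{}_{\overline{{ \mathfrak{S}}_{B}}}}\sigma'_{B}$ --- both up-sets containing $I$ --- one has $K\in\mathcal{A}$ or $(I\setminus K)\in\mathcal{B}$ for every $\varnothing\varsubsetneq K\varsubsetneq I$. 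On the other hand, by Lemma \ref{orderimpliespolynomial} the desired conclusion $\bigsqcap{}^{{}^{S_{AB}}}_{i\in I}\sigma_{i,A}\otimes\sigma_{i,B}\sqsubseteq_{{}_{S_{AB}}}\sigma'_{A}\otimes\sigma'_{B}$ is equivalent to the existence of an $|I|$-ary lattice polynomial $p$ with $\sigma'_{A}\sqsupseteq_{{}_{\overline{{ \mathfrak{S}}_{A}}}}p\big((\sigma_{i,A})_{i\in I}\big)$ and $\sigma'_{B}\sqsupseteq_{{}_{\overline{{ \mathfrak{S}}_{B}}}}p^{\ast}\big((\sigma_{i,B})_{i\in I}\big)$.

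I would then assume, without loss of generality, that $\overline{{ \mathfrak{S}}_{B}}$ is distributive (the case of $\overline{{ \mathfrak{S}}_{A}}$ distributive being symmetric under swapping $K$ with $I\setminus K$, $\mathcal{A}$ with $\mathcal{B}$, and $p$ with $p^{\ast}$). Let $\mathcal{A}_{\min}$ denote the finite set of minimal members of $\mathcal{A}$ and put $p:=\bigvee_{K\in\mathcal{A}_{\min}}\bigwedge_{k\in K}x_k$, so that $p^{\ast}=\bigwedge_{K\in\mathcal{A}_{\min}}\bigvee_{k\in K}x_k$. The first requirement holds with no appeal to distributivity: each $K\in\mathcal{A}_{\min}$ lies in $\mathcal{A}$, so $\sigma'_{A}$ dominates every meet monomial $\bigsqcap{}^{{}^{\overline{{ \mathfrak{S}}_{A}}}}_{k\in K}\sigma_{k,A}$ of $p$. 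For the second requirement, distributivity of $\overline{{ \mathfrak{S}}_{B}}$ lets one expand $p^{\ast}\big((\sigma_{i,B})_{i\in I}\big)$ as the join, over all transversals $T$ of $\mathcal{A}_{\min}$, of the meets $\bigsqcap{}^{{}^{\overline{{ \mathfrak{S}}_{B}}}}_{j\in T}\sigma_{j,B}$; a set $T\subseteq I$ is such a transversal exactly when $I\setminus T\notin\mathcal{A}$, and then the combinatorial condition above --- applied to $K:=I\setminus T$ when $\varnothing\varsubsetneq T\varsubsetneq I$, together with $I\in\mathcal{B}$ for the case $T=I$, while $T=\varnothing$ cannot occur since $\mathcal{A}_{\min}\neq\varnothing$ --- forces $T\in\mathcal{B}$, i.e. $\bigsqcap{}^{{}^{\overline{{ \mathfrak{S}}_{B}}}}_{j\in T}\sigma_{j,B}\sqsubseteq_{{}_{\overline{{ \mathfrak{S}}_{B}}}}\sigma'_{B}$. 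Hence $\sigma'_{B}$ is a common upper bound of all these meets, so the join exists in the down-complete $\overline{{ \mathfrak{S}}_{B}}$ and $\sigma'_{B}\sqsupseteq_{{}_{\overline{{ \mathfrak{S}}_{B}}}}p^{\ast}\big((\sigma_{i,B})_{i\in I}\big)$. Lemma \ref{orderimpliespolynomial} then delivers the conclusion, so $g$ reflects the order and ${\overline{S} }{}^{fin}_{AB}\cong S_{AB}$ as posets.

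I expect the main obstacle to be the scrupulous handling of the evaluation of the dual polynomial $p^{\ast}$ inside the mere Inf semi-lattice $\overline{{ \mathfrak{S}}_{B}}$: one must argue that all finite joins occurring in $p^{\ast}\big((\sigma_{i,B})_{i}\big)$ are actually defined (they are, since over the finite set $I$ only families admitting a common upper bound are ever joined) and that the paper's notion of a distributive Inf semi-lattice genuinely authorizes the transversal, disjunctive-normal-form rewriting used above. Once this bookkeeping is settled, the remaining content is precisely the hypergraph-transversal duality already encoded in the two explicit order descriptions, Lemma \ref{Lemmadevelopetildeleqetilde} and Lemma \ref{orderimpliespolynomial}.
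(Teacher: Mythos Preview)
Your argument is correct and reaches the same conclusion as the paper, but via a different intermediate vehicle. The paper does not invoke Fraser's polynomial criterion (Lemma~\ref{orderimpliespolynomial}) at all; instead it works directly with bi-filters and shows that every bi-filter $F$ containing the generators $(\sigma_{k,A},\sigma_{k,B})$ must contain the pair $\bigl(\bigsqcup_{K\in\mathcal{K}}\bigsqcap_{k\in K}\sigma_{k,A},\ \bigsqcup_{K'\in\mathcal{K}'}\bigsqcap_{m\in I\setminus K'}\sigma_{m,B}\bigr)$ for any partition $\mathcal{K}\sqcup\mathcal{K}'=2^I$ with $I\in\mathcal{K}$, $\varnothing\in\mathcal{K}'$; choosing $\mathcal{K}=\mathcal{A}$ then places $(\sigma'_A,\sigma'_B)$ above an element of $F$. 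The distributive factor enters through the inequality $\bigsqcup_{K'\in\mathcal{K}'}\bigsqcap_{m\in I\setminus K'}\sigma_m \sqsupseteq \bigsqcap_{K\in\mathcal{K}}\bigsqcup_{k\in K}\sigma_k$, established via the same transversal expansion you use (the paper cites Balbes--Dwinger for it). Your route exhibits the Fraser witness polynomial explicitly as $p=\bigvee_{K\in\mathcal{A}_{\min}}\bigwedge_{k\in K}x_k$, which is cleaner and more constructive; the paper's route has the minor advantage of staying entirely inside the bi-filter formalism and thereby avoiding any question of what it means to evaluate $p^\ast$ in a mere Inf semi-lattice. Your closing worry about join-existence in $p^\ast$ is well-founded but not fatal---and, notably, the paper's own ``second step'' involves the same intermediate joins $\bigsqcup_{k\in K}\sigma_{k,B}$ without comment, so the two proofs share exactly this technical wrinkle.
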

\begin{proof}
We suppose that $\overline{{ \mathfrak{S}}_{A}}$ or $\overline{{ \mathfrak{S}}_{B}}$ is distributive and we intent to prove that ${ \mathfrak{F}}\{(\sigma_{i,A},\sigma_{i,B})\;\vert\; i\in I\}=\widetilde{ \mathfrak{F}}\{(\sigma_{i,A},\sigma_{i,B})\;\vert\; i\in I\}$ for any $\{(\sigma_{i,A},\sigma_{i,B})\;\vert\; i\in I\}\subseteq_{fin} \overline{{ \mathfrak{S}}_{A}}\times \overline{{ \mathfrak{S}}_{B}}$.\\
Let us prove the following fact : every bi-filter $F$ which contains $(\sigma_{k,A},\sigma_{k,B})$ for any $k\in I$ contains also $\widetilde{ \mathfrak{F}}\{(\sigma_{i,A},\sigma_{i,B})\;\vert\; i\in I\}$.  In fact, we can show that, for any bi-filter $F$ we have
\begin{eqnarray}
&&\hspace{-1cm}(\forall k\in I,\;(\sigma_{k,A},\sigma_{k,B})\in F) \Rightarrow \nonumber \\
&& (\bigsqcup{}^{{}_{\overline{{ \mathfrak{S}}_{A}}}}_{K\in { \mathcal{K}}}\bigsqcap{}^{{}_{\overline{{ \mathfrak{S}}_{A}}}}_{k\in K}\; \sigma_{k,A}, \bigsqcup{}^{{}_{\overline{{ \mathfrak{S}}_{A}}}}_{K'\in { \mathcal{K}}'}\bigsqcap{}^{{}_{\overline{{ \mathfrak{S}}_{B}}}}_{m\in I-K'} \;\sigma_{m,B})\in F, \nonumber\\
&& \forall { \mathcal{K}},{ \mathcal{K}}'\subseteq 2^I,{ \mathcal{K}}\cup { \mathcal{K}}'=2^I, { \mathcal{K}}\cap { \mathcal{K}}'=\varnothing, \{\varnothing\}\in { \mathcal{K}}', I\in { \mathcal{K}}.\;\;\;\;\;\;\;\;\;\;\;\;\;\label{intermediatedistrib}
\end{eqnarray}

The first step towards (\ref{intermediatedistrib}) is obtained by checking that $\forall { \mathcal{K}},{ \mathcal{K}}'\subseteq 2^I,{ \mathcal{K}}\cup { \mathcal{K}}'=2^I, { \mathcal{K}}\cap { \mathcal{K}}'=\varnothing, \{\varnothing\}\in { \mathcal{K}}', I\in { \mathcal{K}}$,
\begin{eqnarray}
 && (\bigsqcup{}^{{}^{{ \mathfrak{S}}}}_{K'\in { \mathcal{K}}'}\bigsqcap{}^{{}^{{ \mathfrak{S}}}}_{m\in I-K'} \;\sigma_{m}) \sqsupseteq_{{}_{ \mathfrak{S}}}
(\bigsqcap{}^{{}^{{ \mathfrak{S}}}}_{K\in { \mathcal{K}}}\bigsqcup{}^{{}^{{ \mathfrak{S}}}}_{k\in K}\; \sigma_{k})\label{firststep}
\end{eqnarray}
for any distributive ${ \mathfrak{S}}$ and any collection of elements of ${ \mathfrak{S}}$  denoted $\sigma_k$ for $k\in I$ for which these two sides of inequality exist.  To check this fact, we have to note that,  using \cite[Lemma 8 p. 50]{Balbes1975}, we have first of all
\begin{eqnarray}
(\bigsqcap{}^{{}^{{ \mathfrak{S}}}}_{K\in { \mathcal{K}}}\bigsqcup{}^{{}^{{ \mathfrak{S}}}}_{k\in K}\; \sigma_{k})= \bigsqcup{}^{{}^{{ \mathfrak{S}}}}\!\!\! \left\{ \bigsqcap{}^{{}^{{ \mathfrak{S}}}}_{K\in { \mathcal{K}}} \pi_K(A) \;\vert\; A\in \prod_{K\in { \mathcal{K}}} K \right\},
\end{eqnarray}
where $\pi_K$ denotes the projection of the component indexed by K in the cardinal product $\prod_{K\in { \mathcal{K}}} K$. Moreover,  for any $A\in \prod_{K\in { \mathcal{K}}} K$,  there exists $L\in { \mathcal{K}}'$ such that $\bigcup \{ \pi_K(A) \;\vert\; K\in { \mathcal{K}}\}\;\supseteq\; (I\smallsetminus L)$ and then $(\bigsqcap{}^{{}^{{ \mathfrak{S}}}}_{K\in { \mathcal{K}}} \pi_K(A)) \sqsubseteq_{{}_{ \mathfrak{S}}} (\bigsqcap{}^{{}^{{ \mathfrak{S}}}}_{m\in I-L} \;\sigma_{m}) \sqsubseteq_{{}_{ \mathfrak{S}}} (\bigsqcup{}^{{}^{{ \mathfrak{S}}}}_{K'\in { \mathcal{K}}'}\bigsqcap{}^{{}^{{ \mathfrak{S}}}}_{m\in I-K'} \;\sigma_{m})$. As a result, we obtain the property (\ref{firststep}).\\

The second step towards (\ref{intermediatedistrib}) consists in showing that 
\begin{eqnarray}
(\forall k\in I,\;(\sigma_{k,A},\sigma_{k,B})\in F) & \Rightarrow & (\bigsqcup{}^{{}_{\overline{{ \mathfrak{S}}_{A}}}}_{K\in { \mathcal{K}}}\bigsqcap{}^{{}_{\overline{{ \mathfrak{S}}_{A}}}}_{k\in K}\; \sigma_{k,A},  \bigsqcap{}^{{}_{\overline{{ \mathfrak{S}}_{B}}}}_{K\in { \mathcal{K}}}\bigsqcup{}^{{}_{\overline{{ \mathfrak{S}}_{B}}}}_{k\in K}\; \sigma_{k,B})\;\in F\;\;\;\;\;\;\;\;\;\;\;\;\;
\end{eqnarray}
for any ${ \mathcal{K}}\subseteq 2^I$. This intermediary result is obtained by induction on the complexity of the polynomial $(\bigsqcup{}^{{}_{\overline{{ \mathfrak{S}}_{A}}}}_{K\in { \mathcal{K}}}\bigsqcap{}^{{}_{\overline{{ \mathfrak{S}}_{A}}}}_{k\in K}\; \sigma_{k,A})$ by using the following elementary result
\begin{eqnarray}
&&\forall \sigma_A,\sigma'_A\in \overline{{ \mathfrak{S}}_A},\sigma_B,\sigma'_B\in \overline{{ \mathfrak{S}}_B},\;\;\;\; \left( (\sigma_A,\sigma_B),(\sigma'_A,\sigma'_B)\in F \right) \Rightarrow \nonumber\\
&&\hspace{2cm} \left\{
\begin{array}{l}
(\sigma_A \sqcup_{{}_{\overline{{ \mathfrak{S}}_A}}} \sigma'_A,\sigma_B \sqcap_{{}_{\overline{{ \mathfrak{S}}_B}}} \sigma'_B)\in F\nonumber\\
(\sigma_A \sqcap_{{}_{\overline{{ \mathfrak{S}}_A}}} \sigma'_A,\sigma_B \sqcup_{{}_{\overline{{ \mathfrak{S}}_B}}} \sigma'_B)\in F
\end{array}\right.
\end{eqnarray}
trivially deduced using the bi-filter character of $F$, i.e. properties (\ref{defbifilter1})(\ref{defbifilter2})(\ref{defbifilter3}). \\

As a final conclusion,  using the explicit definition of ${ \mathfrak{F}}\{(\sigma_{i,A},\sigma_{i,B})\;\vert\; i\in I\}$ as the intersection of all bi-ideals containing $(\sigma_{k,A},\sigma_{k,B})$ for any $k\in I$,  we obtain $\widetilde{ \mathfrak{F}}\{(\sigma_{i,A},\sigma_{i,B})\;\vert\; i\in I\}={ \mathfrak{F}}\{(\sigma_{i,A},\sigma_{i,B})\;\vert\; i\in I\}$.\\

${{\overline{S} }}{}^{fin}_{AB}$ and ${S}_{AB}$ are then isomorphic posets.
\end{proof}

\begin{theorem}\label{SASBditribandcup}
If ${ \mathfrak{S}}_A$ \underline{and} ${{ \mathfrak{S}}_B}$ are  simplex spaces of states, then ${{\overline{S} }}_{AB}$ is also a simplex space of states.  In other words, the bipartite experiments based on a pair ${ \mathfrak{S}}_A,{ \mathfrak{S}}_B$ of deterministic spaces of states are described by a deterministic space of states obtained as the canonical tensor product ${ \mathfrak{S}}_A {\otimes}{ \mathfrak{S}}_B$ (Indeed, using Theorem \ref{theoremsqsubseteqPAB=SAB}, we have also ${{\overline{S} }}{}^{fin}_{AB}={S}_{AB}$ in this situation).
\\
In that case, the explicit expression for the supremum of two elements in ${{\overline{S} }}{}^{fin}_{AB}$ is given by
\begin{eqnarray}
&&\hspace{-1cm}(\bigsqcap{}^{{}^{{{\overline{S} }}_{AB}}}_{i\in I} \sigma_{i,A}\widetilde{\otimes} \sigma_{i,B}) \sqcup{}_{{}_{{{\overline{S} }}_{AB}}} (\bigsqcap{}^{{}^{{{\overline{S} }}_{AB}}}_{j\in J} \sigma'_{j,A}\widetilde{\otimes} \sigma'_{j,B}) = \nonumber\\
&&\hspace{1cm} =\bigsqcap{}^{{}^{{{\overline{S} }}_{AB}}}_{i\in I,\;j\in J}\; (\sigma_{i,A}\sqcup_{{}_{{ \mathfrak{S}}_A}}\sigma'_{j,A}) \widetilde{\otimes} (\sigma_{i,B} \sqcup_{{}_{{ \mathfrak{S}}_B}} \sigma'_{j,B}). \;\;\;\;\;\;\;\;\;\label{formulacupSAB}
\end{eqnarray}
\end{theorem}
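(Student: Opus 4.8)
The strategy is to reduce everything to the two previous theorems. First I would invoke Theorem~\ref{theoremsqsubseteqPAB=SAB}: since ${ \mathfrak{S}}_A$ and ${ \mathfrak{S}}_B$ are simplex, they are in particular distributive (by the lemma stating a simplex space is distributive as an Inf semi-lattice), so ${{\overline{S} }}{}^{fin}_{AB}$ and ${S}_{AB}$ are isomorphic posets, and moreover (using Lemma~\ref{orderimpliespolynomial}, the Fraser characterization of the order on the canonical tensor) the order on ${S}_{AB}$ is governed by lattice polynomials. The first task is then to show ${{\overline{S} }}_{AB}={{\overline{S} }}{}^{fin}_{AB}$ in this case, i.e.\ that every element of the minimal tensor product is already a \emph{finite} infimum of pure tensors. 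This should follow from the Finite Rank–type behaviour forced by distributivity together with the explicit expansion~(\ref{developmentetildeordersimplify}) of the order: because in a simplex the decomposition $\sigma=\bigsqcap U_\sigma$ into maximal elements is unique and the relevant meets collapse, an arbitrary infimum $\bigsqcap^{{{\overline{S}}}_{AB}}_{i\in I}\sigma_{i,A}\widetilde{\otimes}\sigma_{i,B}$ can be rewritten using the bimorphism properties~(\ref{pitensor=tensorpi1preminimal})–(\ref{pitensor=tensorpi2preminimal}) and distributivity as a finite such infimum indexed by pairs of pure states.

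Next I would establish that ${{\overline{S} }}_{AB}$ is itself a simplex space of states. Here I would exhibit the real structure: equip ${{\overline{S} }}_{AB}$ with the involution inherited factorwise on pure tensors, $\;(\sigma_A\widetilde{\otimes}\sigma_B)^\star$ defined via the decomposition into maximal elements as in formula~(\ref{starsimplex}), and check that the maximal elements of ${{\overline{S} }}_{AB}$ are exactly the pure tensors $\alpha\widetilde{\otimes}\beta$ with $\alpha\in\overline{{ \mathfrak{S}}_A}^{pure}$, $\beta\in\overline{{ \mathfrak{S}}_B}^{pure}$, and that ${{\overline{S} }}_{AB}$ is generated by them with \emph{unique} decomposition. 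Uniqueness of the decomposition is the hypothesis of Lemma~\ref{lemmaUsimplex}, which then gives that $(\overline{{{\overline{S} }}_{AB}},\star)$ is deterministic; combined with Theorem~\ref{completionsimplex} (no hidden states appear over a simplex) this yields ${ \mathfrak{S}}_{AB}=\overline{{{\overline{S} }}_{AB}}={{\overline{S} }}_{AB}$, so the bipartite description collapses onto the canonical tensor product $\overline{{ \mathfrak{S}}_A}\otimes\overline{{ \mathfrak{S}}_B}$.

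For the explicit supremum formula~(\ref{formulacupSAB}), I would argue by double inequality using Definition~\ref{defsqsubseteqSAB} of the order on ${{\overline{S} }}_{AB}$ (evaluation against all real effects ${ \mathfrak{l}}_A\otimes{ \mathfrak{l}}_B$) together with the distributivity of $\bullet$ over $\wedge$ in ${ \mathfrak{B}}$. The candidate right-hand side $\bigsqcap^{{{\overline{S}}}_{AB}}_{i,j}(\sigma_{i,A}\sqcup\sigma'_{j,A})\widetilde{\otimes}(\sigma_{i,B}\sqcup\sigma'_{j,B})$ is clearly an upper bound of both $\bigsqcap_i\sigma_{i,A}\widetilde{\otimes}\sigma_{i,B}$ and $\bigsqcap_j\sigma'_{j,A}\widetilde{\otimes}\sigma'_{j,B}$ (monotonicity of the factorwise joins, which exist since $\overline{{ \mathfrak{S}}_A}$, $\overline{{ \mathfrak{S}}_B}$ are simplices and every pair of comparable-capped elements joins); for minimality I would take an arbitrary common upper bound $\tau$ and, using the finite-infimum form established above and Lemma~\ref{sigmainFinequality}/Lemma~\ref{orderimpliespolynomial}, feed the defining polynomial inequalities through to conclude $\tau$ dominates the stated expression. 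Alternatively, and more cleanly once $ {{\overline{S} }}_{AB}\cong{S}_{AB}$ is known, the formula is the standard join formula in the free-semilattice construction of the canonical tensor product modulo the two defining relations.

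\textbf{Main obstacle.} The delicate step is the first one: proving that over a pair of simplices every element of the \emph{minimal} tensor product ${{\overline{S} }}_{AB}$ is a finite infimum of pure tensors, i.e.\ ${{\overline{S} }}_{AB}={{\overline{S} }}{}^{fin}_{AB}$. The minimal tensor product is defined as a quotient of \emph{all} of ${ \mathcal{P}}(\overline{{ \mathfrak{S}}_A}\times\overline{{ \mathfrak{S}}_B})$ (arbitrary, possibly infinite families), and a priori an infinite infimum of pure tensors need not reduce to a finite one; one must use the simplex hypothesis in an essential way — concretely, that in a simplex the congruence $\approx$ identifies an infinite family with the finite subfamily supported on the pure states actually appearing — and this is where distributivity of $\bullet$ over arbitrary $\wedge$ (property~(\ref{distributivitybullet})) and the unique-decomposition property of simplices must be carefully combined. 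Everything downstream is then bookkeeping with the already-proved Theorems~\ref{theoremsqsubseteqPAB=SAB}, \ref{completionsimplex} and Lemma~\ref{lemmaUsimplex}.
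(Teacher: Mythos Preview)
Your identified ``main obstacle'' is a red herring, and pursuing it leads you away from the direct argument. You set yourself the task of proving ${{\overline{S}}}_{AB}={{\overline{S}}}{}^{fin}_{AB}$, but this equality is neither claimed in the theorem nor needed for the proof, and it is likely false when the factor simplices have infinitely many pure states. The paper's proof bypasses this entirely: the expansion formula~(\ref{developmentetildeordersimplify}) in Lemma~\ref{Lemmadevelopetildeleqetilde} already holds for an \emph{arbitrary} index set $I$, so one can test the simplex property directly. Concretely, the paper assumes a pure state $\sigma_A\widetilde{\otimes}\sigma_B$ lies above $\bigsqcap_{i\in I}\sigma_{i,A}\widetilde{\otimes}\sigma_{i,B}$ (with all $\sigma_{i,A}\widetilde{\otimes}\sigma_{i,B}$ pure) but is not in the family, expands via~(\ref{developmentetildeordersimplify}), and uses simplexity of ${ \mathfrak{S}}_A$ and ${ \mathfrak{S}}_B$ separately on each clause to force $\sigma_A\widetilde{\otimes}\sigma_B=\sigma_{i,A}\widetilde{\otimes}\sigma_{i,B}$ for some $i$ --- a contradiction. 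No passage through the canonical tensor product, no Lemma~\ref{orderimpliespolynomial}, no finiteness reduction.

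Your invocation of Theorem~\ref{completionsimplex} is also superfluous here: that result concerns the ontic completion ${ \mathfrak{J}}_{{ \mathfrak{S}}'}$ of a simplex, but the present theorem only asserts that ${{\overline{S}}}_{AB}$ itself is a simplex, with no completion in sight (the completion statement comes later in the paper). For the supremum formula~(\ref{formulacupSAB}), once ${{\overline{S}}}_{AB}$ is known to be a simplex it is distributive, and the formula is then a one-line citation of Fraser's Theorem~3; your double-inequality argument via evaluations would work but is heavier than needed.
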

\begin{proof}
If  ${{\overline{S} }}_{AB}$ were not a simplex, it would mean that we would have a family  $(\sigma_{i,A}\widetilde{\otimes}\sigma_{i,B})_{i\in I}$ of pure states of ${{\overline{S} }}_{AB}$, and another pure state $\sigma_{A}\widetilde{\otimes}\sigma_{B}$ such that 
\begin{eqnarray}
\bigsqcap{}^{{}^{{{\overline{S} }}_{AB}}}_{i\in I} \sigma_{i,A}\widetilde{\otimes} \sigma_{i,B} &\sqsubseteq_{{{\overline{S} }}_{AB}}
& \sigma_{A}\widetilde{\otimes}\sigma_{B} \label{equationforsimplextensor}
\end{eqnarray}
but with
\begin{eqnarray}
\sigma_{A}\widetilde{\otimes}\sigma_{B} &\notin &
\{\,\sigma_{i,A}\widetilde{\otimes} \sigma_{i,B} \;\vert\; i\in I\,\}.
\end{eqnarray}
However, using the expansion (\ref{developmentetildeordersimplify}) with (\ref{equationforsimplextensor}), we deduce $\sigma_{B}\sqsupseteq_{{}_{{ \mathfrak{S}}_B}} \bigsqcap{}^{{}^{{ \mathfrak{S}}_B}}_{i\in I} \sigma_{i,B}$ and $\sigma_{A}\sqsupseteq_{{}_{{ \mathfrak{S}}_A}} \bigsqcap{}^{{}^{{ \mathfrak{S}}_A}}_{i\in I} \sigma_{i,A}$ and $\forall \varnothing \varsubsetneq K  \varsubsetneq I, \;\; (\bigsqcap{}^{{}_{{ \mathfrak{S}}_{A}}}_{k\in K}\; \sigma_{k,A}) \;\sqsubseteq_{{}_{{ \mathfrak{S}}_{A}}}\sigma_{A}\;\;\;\textit{\rm or}\;\;\; (\bigsqcap{}^{{}_{{ \mathfrak{S}}_{B}}}_{m\in I-K} \;\sigma_{m,B})\; \sqsubseteq_{{}_{{ \mathfrak{S}}_{B}}} \sigma_{B}$. Now, we use the simplexity of ${ \mathfrak{S}}_A$ and ${ \mathfrak{S}}_B$ to deduce that $\sigma_{B}\in \{\,\sigma_{i,B}\;\vert\; i\in I\,\}$ and $\sigma_{A}\in \{\,\sigma_{i,A}\;\vert\; i\in I\,\}$ and $\forall \varnothing \varsubsetneq K  \varsubsetneq I, \;\; \sigma_{A}\in \{\,\sigma_{i,A}\;\vert\; i\in K\,\}\;\;\;\textit{\rm or}\;\;\; \sigma_{B}\in \{\,\sigma_{i,B}\;\vert\; i\in I-K\,\}$. We then obtain immediately the following result : there exists a unique $i\in I$ such that $\sigma_{A}\widetilde{\otimes}\sigma_{B}=\sigma_{i,A}\widetilde{\otimes} \sigma_{i,B}$. We have then obtained a contradiction. As a conclusion, ${{\overline{S} }}_{AB}$ is a simplex.\\ 
Endly, the formula (\ref{formulacupSAB}) is a direct consequence of the distributivity in ${{\overline{S} }}_{AB}$ (see \cite[Theorem 3]{Fraser1978}).
\end{proof}

\subsection{Complementary results about the minimal tensor product}\label{subsectiontensorpreliminary}

During this subsection, we consider the two ontic completions $(\!(\overline{ \mathfrak{S}}_A,\star)\!)_c$ and $(\!(\overline{ \mathfrak{S}}_B,\star)\!)_c$. 

The aim of this introductory subsection is to prove that ${{{ \overline{S}}}}_{AB}:=\overline{ \mathfrak{S}}_A\widetilde{\otimes}\overline{ \mathfrak{S}}_B$ fulfills the basic conditions necessary to consider its "ontic completion".  We recall that the basic conditions necessary for ${{{ \overline{S}}}}_{AB}$ to admit an  ontic completion are
\begin{enumerate}
\item ${{{ \overline{S}}}}_{AB}$ has to be a down complete Inf semi-lattice with bottom element $\bot_{{}_{{{{ \overline{S}}}}_{AB}}}$, such that  ${{{ \overline{S}}}}_{AB}$ is generated by its maximal elements, i.e. 
 \begin{eqnarray}
\hspace{-1.5cm}&&\forall \sigma \in {{{ \overline{S}}}}_{AB}, \;\; \sigma= \bigsqcap{}^{{}^{{{{ \overline{S}}}}_{AB}}}  \underline{\sigma}_{{}_{{{{ \overline{S}}}}_{AB}}} \;\;\textit{\rm where}\;\;
\underline{\sigma}_{{}_{ {{{ \overline{S}}}}_{AB}}}:=\{\, \sigma'\in  {{{ \overline{S}}}}{}_{AB}^{\,{}^{pure}}\;\vert\; \sigma'\sqsupseteq_{{}_{{{{ \overline{S}}}}_{AB}}} \sigma\;\}
\;\;\textit{\rm and}\;\; {{{ \overline{S}}}}{}_{AB}^{\,{}^{pure}}:=Max({{{ \overline{S}}}}_{AB})\;\;\;\;\;\;\;\;\;\;
\end{eqnarray}
%and satisfies the properties
%\begin{eqnarray}
%\hspace{-0.5cm}\forall \alpha,\beta\in {{{ \overline{S}}}}_{AB}^{{}^{pure}},&& \beta\not=\alpha \;\;\Rightarrow \;\; (\alpha\sqcap_{{}_{{{{ \overline{S}}}}_{AB}}}\beta) \sqcoversubset_{{}_{{{{ \overline{S}}}}_{AB}}}\alpha,\beta.
%\end{eqnarray}
%and 
%\begin{eqnarray}
%\hspace{-0.5cm}\forall \omega\in {{{ \overline{S}}}}_{AB}^{{}^{pure}},\forall \alpha,\beta\in {{{ \overline{S}}}}_{AB}\;\vert\; \alpha,\beta\sqcoversubset_{{}_{{{{ \overline{S}}}}_{AB}}}\omega \;\textit{\rm and}\;\alpha\not=\beta && (\alpha\sqcap_{{}_{{{{ \overline{S}}}}_{AB}}}\beta) \sqcoversubset_{{}_{{{{ \overline{S}}}}_{AB}}}\alpha,\beta.\;\;\;\;\;\;\;\;\;\;\;\;\;\;\;\;\;\;
%\hspace{-1cm}\forall \lambda,\alpha,\beta,\gamma,\delta\in {{{ \overline{S}}}}_{AB}^{{}^{pure}} \;\textit{\rm distinct},&& (\,(\alpha\sqcap_{{}_{\overline{\mathfrak{S}}}}\beta)\not=(\gamma\sqcap_{{}_{\overline{\mathfrak{S}}}}\delta)\;\;\;\textit{\rm and}\;\;\;(\alpha\sqcap_{{}_{\overline{\mathfrak{S}}}}\beta),(\gamma\sqcap_{{}_{\overline{\mathfrak{S}}}}\delta)\sqcoversubset_{{}_{\mathfrak{S}}}\lambda \,) \;\;\Rightarrow \nonumber\\
%&&\hspace{1cm} (\alpha\sqcap_{{}_{\overline{ \mathfrak{S}}}}\beta\sqcap_{{}_{\overline{ \mathfrak{S}}}}\gamma\sqcap_{{}_{\overline{ \mathfrak{S}}}}\delta) \sqcoversubset_{{}_{\overline{ \mathfrak{S}}}}(\alpha\sqcap_{{}_{\overline{\mathfrak{S}}}}\beta),(\gamma\sqcap_{{}_{\overline{\mathfrak{S}}}}\delta)
%\end{eqnarray}
\item $\star$ has to be a map from ${{{ \overline{S}}}}_{AB}\smallsetminus \{\bot_{{}_{{{{ \overline{S}}}}_{AB}}}\}$ to ${{{ \overline{S}}}}_{AB}\smallsetminus \{\bot_{{}_{{{{ \overline{S}}}}_{AB}}}\}$ satisfying
\begin{eqnarray}
\forall \sigma\in {{{ \overline{S}}}}_{AB}\smallsetminus \{\bot_{{}_{{{{ \overline{S}}}}_{AB}}}\},&& (\sigma^{\star})^{\star}=\sigma, \\
\forall \sigma_1,\sigma_2\in {{{ \overline{S}}}}_{AB}\smallsetminus \{\bot_{{}_{{{{ \overline{S}}}}_{AB}}}\},&& \sigma_1\sqsubseteq_{{}_{{{{{ \overline{S}}}}_{AB}}}}\sigma_2\;\;\Rightarrow\;\; \sigma_2^\star \sqsubseteq_{{}_{{{{{ \overline{S}}}}_{AB}}}} \sigma_1^\star,\\
\forall \sigma\in {{{ \overline{S}}}}_{AB}\smallsetminus \{\bot_{{}_{{{{ \overline{S}}}}_{AB}}}\},&& \neg \;\widehat{\sigma^{\star}\sigma}{}^{{}^{{{{{ \overline{S}}}}_{AB}}}}.
\end{eqnarray}
\end{enumerate}

\vspace{0.3cm}

We have already proven, in subsection \ref{subsectioncompoundfirstremarks}, that ${{{ \overline{S}}}}_{AB}$ is a down complete Inf semi-lattice and that it admits a bottom element (Theorem \ref{bottomcompound}). 
Let us then prove that ${{{ \overline{S}}}}_{AB}$ is generated by its pure states.\\

We begin first to characterize the maximal states of ${{{ \overline{S}}}}_{AB}=\overline{ \mathfrak{S}}_A\widetilde{\otimes}\overline{ \mathfrak{S}}_B$. 
\begin{theorem}\label{theorempuretilde} 
\begin{eqnarray}
{{{ \overline{S}}}}{}^{\;{}^{pure}}_{AB} & = & \{\, \sigma_A \widetilde{\otimes} \sigma_B\;\vert\; \sigma_A \in \overline{ \mathfrak{S}}^{\;{}^{pure}}_{A}, \sigma_B \in \overline{ \mathfrak{S}}^{\;{}^{pure}}_{B}\,\} = Max({{{ \overline{S}}}}_{AB})
\end{eqnarray}
\end{theorem}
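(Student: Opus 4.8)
The plan is to prove the two equalities simultaneously by showing the chain of inclusions
\[
\{\, \sigma_A \widetilde{\otimes} \sigma_B\;\vert\; \sigma_A \in \overline{ \mathfrak{S}}^{\;{}^{pure}}_{A}, \sigma_B \in \overline{ \mathfrak{S}}^{\;{}^{pure}}_{B}\,\} \subseteq Max({{{ \overline{S}}}}_{AB}) \subseteq {{{ \overline{S}}}}{}^{\;{}^{pure}}_{AB} \subseteq \{\, \sigma_A \widetilde{\otimes} \sigma_B\;\vert\; \sigma_A \in \overline{ \mathfrak{S}}^{\;{}^{pure}}_{A}, \sigma_B \in \overline{ \mathfrak{S}}^{\;{}^{pure}}_{B}\,\}.
\]
Here I write ${{{ \overline{S}}}}{}^{\;{}^{pure}}_{AB}$ for the set of completely meet-irreducible elements of ${{{ \overline{S}}}}_{AB}$, and the first and last inclusions together with the middle one will close the loop.

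First I would handle the inclusion $Max({{{ \overline{S}}}}_{AB}) \subseteq {{{ \overline{S}}}}{}^{\;{}^{pure}}_{AB}$ and its use in the last inclusion. Every element of ${{{ \overline{S}}}}_{AB}$ is, by construction (Theorem \ref{theoremaxiomA1bipartitepre}), of the form $\bigsqcap{}^{{}^{{{\overline{S} }}_{AB}}}_{i\in I} \sigma_{i,A}\widetilde{\otimes} \sigma_{i,B}$, i.e.\ an infimum of pure tensors. If such an element $\sigma$ is completely meet-irreducible, then it cannot be written as a non-trivial infimum, so the only possibility is that $\sigma = \sigma_{i,A}\widetilde{\otimes} \sigma_{i,B}$ for a single index, and moreover using the bimorphic properties (\ref{pitensor=tensorpi1preminimal})(\ref{pitensor=tensorpi2preminimal}) one may reduce $\sigma_{i,A}$ to an infimum of elements of $\overline{ \mathfrak{S}}^{\;{}^{pure}}_{A}$ and likewise $\sigma_{i,B}$; complete meet-irreducibility of $\sigma$ then forces each of $\sigma_{i,A}$ and $\sigma_{i,B}$ to be a single pure state (a non-trivial decomposition of either factor would, via the bimorphic laws, yield a non-trivial decomposition of $\sigma$). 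This proves ${{{ \overline{S}}}}{}^{\;{}^{pure}}_{AB} \subseteq \{\, \sigma_A \widetilde{\otimes} \sigma_B\;\vert\; \sigma_A \in \overline{ \mathfrak{S}}^{\;{}^{pure}}_{A}, \sigma_B \in \overline{ \mathfrak{S}}^{\;{}^{pure}}_{B}\,\}$. The inclusion $Max({{{ \overline{S}}}}_{AB}) \subseteq {{{ \overline{S}}}}{}^{\;{}^{pure}}_{AB}$ is then a general order-theoretic observation once one checks that a maximal element of ${{{ \overline{S}}}}_{AB}$ admits no proper meet-decomposition — here the expansion formula (\ref{developmentetildeordersimplify}) of Lemma \ref{Lemmadevelopetildeleqetilde} is the key tool to rule out that a maximal pure tensor could sit strictly above an infimum of two incomparable pure tensors.

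The remaining and genuinely substantive inclusion is $\{\, \sigma_A \widetilde{\otimes} \sigma_B\;\vert\; \sigma_A \in \overline{ \mathfrak{S}}^{\;{}^{pure}}_{A}, \sigma_B \in \overline{ \mathfrak{S}}^{\;{}^{pure}}_{B}\,\} \subseteq Max({{{ \overline{S}}}}_{AB})$: one must show that for $\sigma_A$ pure in $\overline{ \mathfrak{S}}_A$ and $\sigma_B$ pure in $\overline{ \mathfrak{S}}_B$, the element $\sigma_A \widetilde{\otimes} \sigma_B$ is maximal in ${{{ \overline{S}}}}_{AB}$. Suppose $\sigma_A \widetilde{\otimes} \sigma_B \sqsubseteq_{{}_{{{\overline{S} }}_{AB}}} \bigsqcap{}^{{}^{{{\overline{S} }}_{AB}}}_{j\in J} \sigma'_{j,A}\widetilde{\otimes} \sigma'_{j,B}$ for some element on the right; I would apply the expansion (\ref{developmentetildeordersimplify}) in the ``reversed'' direction, i.e.\ exploit that each generator $\sigma'_{j,A}\widetilde{\otimes}\sigma'_{j,B}$ lies above $\sigma_A\widetilde{\otimes}\sigma_B$, so by Lemma \ref{Lemmadevelopetildeleqetilde} applied with the singleton $\{(\sigma_A,\sigma_B)\}$ we get $\sigma_A \sqsubseteq_{{}_{\overline{{ \mathfrak{S}}_{A}}}} \sigma'_{j,A}$ or $\sigma_B \sqsubseteq_{{}_{\overline{{ \mathfrak{S}}_{B}}}} \sigma'_{j,B}$ for each $j$; combined with maximality of $\sigma_A$ in $\overline{ \mathfrak{S}}_A$ and $\sigma_B$ in $\overline{ \mathfrak{S}}_B$, this gives $\sigma'_{j,A}=\sigma_A$ or $\sigma'_{j,B}=\sigma_B$, hence $\sigma'_{j,A}\widetilde{\otimes}\sigma'_{j,B} \sqsupseteq_{{}_{{{\overline{S} }}_{AB}}} \sigma_A\widetilde{\otimes}\sigma_B$ can only be an equality $\sigma'_{j,A}\widetilde{\otimes}\sigma'_{j,B}=\sigma_A\widetilde{\otimes}\sigma_B$ once one checks there is nothing strictly above a product of two pure factors. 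I expect the main obstacle to be precisely this last point — controlling the order relation from above at a ``doubly pure'' tensor — and the cleanest route is to argue directly via the separating family of effects $\nu\,{}^{\overline{{ \mathfrak{S}}_{A}}\overline{{ \mathfrak{S}}_{B}}}_{{\mathfrak{l}}_A,{\mathfrak{l}}_B}$ from Definition \ref{defsqsubseteqSAB}, showing that any $\widetilde\sigma_{AB} \sqsupseteq \sigma_A\widetilde\otimes\sigma_B$ with $\sigma_A,\sigma_B$ pure must already equal $\sigma_A\widetilde\otimes\sigma_B$ because the evaluations $\epsilon^{{ \mathfrak{S}}_A}_{{\mathfrak{l}}_A}(\sigma_A)$ and $\epsilon^{{ \mathfrak{S}}_B}_{{\mathfrak{l}}_B}(\sigma_B)$ already saturate all admissible patterns in $\overline{ \mathfrak{S}}_A$ and $\overline{ \mathfrak{S}}_B$ respectively (by maximality), leaving no room for a strictly larger joint state. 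Assembling the three inclusions closes the two stated equalities.
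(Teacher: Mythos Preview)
Your overall strategy---the circular chain of inclusions---is correct and matches the paper's approach in its essential ingredients (the bimorphic laws for one direction, the expansion formula (\ref{developmentetildeordersimplify}) for the other). However, you misread Lemma~\ref{Lemmadevelopetildeleqetilde} at the key step. When the left-hand side $u_{AB}$ is the singleton $\{(\sigma_A,\sigma_B)\}$, the index set $I$ has no proper nonempty subset, so the quantified ``or'' clause in (\ref{developmentetildeordersimplify}) is vacuous and only the first two \emph{conjoined} conditions survive. You therefore obtain $\sigma_A \sqsubseteq_{{}_{\overline{\mathfrak{S}}_A}} \sigma'_{j,A}$ \emph{and} $\sigma_B \sqsubseteq_{{}_{\overline{\mathfrak{S}}_B}} \sigma'_{j,B}$, not ``or''. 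With maximality of $\sigma_A,\sigma_B$ this immediately yields $\sigma'_{j,A}=\sigma_A$ and $\sigma'_{j,B}=\sigma_B$ for every $j$, so $\bigsqcap_j \sigma'_{j,A}\widetilde\otimes\sigma'_{j,B}=\sigma_A\widetilde\otimes\sigma_B$ and maximality follows at once. The ``main obstacle'' you anticipate and the detour through the separating effects $\nu$ are therefore unnecessary; the paper proves this inclusion by exactly this one-line application of (\ref{developmentetildeordersimplify}). (Incidentally, $Max(\overline{S}_{AB})\subseteq\overline{S}_{AB}^{\,{}^{pure}}$ is a general order-theoretic fact---a maximal element can never be a nontrivial infimum---so no expansion formula is needed there either.)
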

\begin{proof}
First of all, it is a tautological fact that the completely meet-irreducible elements of  ${{{ \overline{S}}}}_{AB}$ are necessarily pure tensors of  ${{{ \overline{S}}}}_{AB}$, i.e.  elements of the form $\sigma_A \widetilde{\otimes} \sigma_B$. \\
Let us then consider $\sigma_A \widetilde{\otimes} \sigma_B$ a completely meet-irreducible element of ${{{ \overline{S}}}}_{AB}$ and let us assume that $\sigma_A  = \bigsqcap{}^{{}^{{ \mathfrak{S}}_{A}}}_{i\in I} \sigma_{i,A}$ for $ \sigma_{i,A}\in \overline{ \mathfrak{S}}_{A}$ for any $i\in I$. We have then $(\sigma_A \widetilde{\otimes} \sigma_B) = ((\bigsqcap{}^{{}^{{ \mathfrak{S}}_{A}}}_{i\in I} \sigma_{i,A})\widetilde{\otimes}  \sigma_B)=\bigsqcap{}^{{}^{{{{ \overline{S}}}}_{AB}}}_{i\in I} ( \sigma_{i,A}\widetilde{\otimes}  \sigma_B)$.  On another part, $\sigma_A \widetilde{\otimes} \sigma_B$ being completely meet-irreducible in ${{{ \overline{S}}}}_{AB}$, there exists $k\in I$ such that $\sigma_A \widetilde{\otimes} \sigma_B=\sigma_{k,A} \widetilde{\otimes} \sigma_B$, i.e, $\sigma_A=\sigma_{k,A}$. As a conclusion, $\sigma_A$ is completely meet-irreducible. In the same way, $\sigma_B$ is completely meet-irreducible.  As a first result, pure states of ${{{ \overline{S}}}}_{AB}$ are necessarily of the form $\sigma_A \widetilde{\otimes} \sigma_B$ with $\sigma_A \in \overline{ \mathfrak{S}}{}^{\;{}^{pure}}_{A}, \sigma_B \in \overline{ \mathfrak{S}}{}^{\;{}^{pure}}_{B}$.\\
Conversely, let us consider $\sigma_A$ a pure state of $\overline{ \mathfrak{S}}_{A}$ and $\sigma_B$ a pure state of $\overline{ \mathfrak{S}}_{B}$, and let us suppose that $ (\bigsqcap{}^{{}^{{{{ \overline{S}}}}_{AB}}}_{i\in I} \sigma_{i,A}\widetilde{\otimes}  \sigma_{i,B})  =  (\sigma_A \widetilde{\otimes} \sigma_B)$ with $\sigma_{i,A}\in \overline{ \mathfrak{S}}_{A}$ and $\sigma_{i,B}\in \overline{ \mathfrak{S}}_{B}$ for any $i\in I$. We now exploit the two conditions $(\bigsqcap{}^{{}_{{ \mathfrak{S}}_{A}}}_{k\in I}\; \sigma_{k,A}) = \sigma_{A}$ and $(\bigsqcap{}^{{}_{{ \mathfrak{S}}_{B}}}_{m\in I} \;\sigma_{m,B}) = \sigma_{B}$ derived from the expansion (\ref{developmentetildeordersimplify}).  From $\sigma_A\in Max(\overline{ \mathfrak{S}}_{A})$ and $\sigma_B\in Max(\overline{ \mathfrak{S}}_{B})$, we deduce that $\sigma_{i,A} = \sigma_A$ and $\sigma_{j,B} = \sigma_B$ for any $i,j\in I$. As a second result, we have then obtained that the state $(\sigma_A \widetilde{\otimes} \sigma_B)$, with $\sigma_A$ a pure state of $\overline{ \mathfrak{S}}_{A}$ and $\sigma_B$ a pure state of $\overline{ \mathfrak{S}}_{B}$, is completely meet-irreducible.\\
From the expansion (\ref{developmentetildeordersimplify}), we deduce also immediately that $(\sigma_A \widetilde{\otimes} \sigma_B)\in Max({{{ \overline{S}}}}_{AB})$ as long as $\sigma_A\in Max(\overline{ \mathfrak{S}}_{A})$ and $\sigma_B\in Max(\overline{ \mathfrak{S}}_{B})$.
\end{proof}

Secondly, we intent to prove that these pure states of ${{{ \overline{S}}}}_{AB}$ form a generating set for ${{{ \overline{S}}}}_{AB}$.

\begin{theorem} \label{theoremA4Ptilde}
\begin{eqnarray}
&&\forall \sigma \in {{{ \overline{S}}}}_{AB}, \;\; \sigma= \bigsqcap{}^{{}^{{{{ \overline{S}}}}_{AB}}}  \underline{\,\sigma\,}_{{}_{ {{{ \overline{S}}}}_{AB}}},\;\;\textit{\rm where}\;\;
\underline{\,\sigma\,}_{{}_{ {{{ \overline{S}}}}_{AB}}}=
({{{ \overline{S}}}}{}_{AB}^{\,{}^{pure}} \cap (\uparrow^{{}^{{{{ \overline{S}}}}_{AB}}}\!\!\!\! \sigma) ).
\end{eqnarray}
\end{theorem}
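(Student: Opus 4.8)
The goal is to show that every element $\sigma$ of ${{{ \overline{S}}}}_{AB}=\overline{ \mathfrak{S}}_A\widetilde{\otimes}\overline{ \mathfrak{S}}_B$ is the infimum of the pure tensors lying above it. Since one inclusion — namely $\sigma\sqsubseteq_{{}_{{{\overline{S} }}_{AB}}}\bigsqcap{}^{{}^{{{{ \overline{S}}}}_{AB}}}\underline{\,\sigma\,}_{{}_{ {{{ \overline{S}}}}_{AB}}}$ — is automatic (every element of $\underline{\,\sigma\,}_{{}_{ {{{ \overline{S}}}}_{AB}}}$ is by definition an upper bound of $\sigma$), the content is the reverse inequality. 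By Theorem \ref{theoremaxiomA1bipartitepre} a generic element of ${{{ \overline{S}}}}_{AB}$ has the form $\sigma=\bigsqcap{}^{{}^{{{\overline{S} }}_{AB}}}_{i\in I}\sigma_{i,A}\widetilde{\otimes}\sigma_{i,B}$, and by the two bimorphic properties (\ref{pitensor=tensorpi1preminimal})--(\ref{pitensor=tensorpi2preminimal}) together with the fact that $\overline{ \mathfrak{S}}_A$ and $\overline{ \mathfrak{S}}_B$ are generated by their maximal elements (\ref{completemeetirreducibleordergeneratingS}), each factor $\sigma_{i,A}\widetilde{\otimes}\sigma_{i,B}$ can itself be rewritten as $\bigsqcap{}^{{}^{{{\overline{S} }}_{AB}}}\{\alpha\widetilde{\otimes}\beta\;\vert\;\alpha\in\underline{\sigma_{i,A}}_{{}_{\overline{ \mathfrak{S}}_A}},\beta\in\underline{\sigma_{i,B}}_{{}_{\overline{ \mathfrak{S}}_B}}\}$. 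Hence $\sigma$ is already an infimum of pure tensors; call such a presentation $\sigma=\bigsqcap{}^{{}^{{{\overline{S} }}_{AB}}}_{j\in J}\alpha_j\widetilde{\otimes}\beta_j$ with $\alpha_j\in\overline{ \mathfrak{S}}^{\;{}^{pure}}_A$, $\beta_j\in\overline{ \mathfrak{S}}^{\;{}^{pure}}_B$. By Theorem \ref{theorempuretilde} each $\alpha_j\widetilde{\otimes}\beta_j$ is a pure state of ${{{ \overline{S}}}}_{AB}$, and it obviously upper-bounds $\sigma$, so $\alpha_j\widetilde{\otimes}\beta_j\in\underline{\,\sigma\,}_{{}_{ {{{ \overline{S}}}}_{AB}}}$ for every $j$. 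Therefore $\bigsqcap{}^{{}^{{{{ \overline{S}}}}_{AB}}}\underline{\,\sigma\,}_{{}_{ {{{ \overline{S}}}}_{AB}}}\sqsubseteq_{{}_{{{\overline{S} }}_{AB}}}\bigsqcap{}^{{}^{{{\overline{S} }}_{AB}}}_{j\in J}\alpha_j\widetilde{\otimes}\beta_j=\sigma$, which is exactly the missing inequality, and combined with the trivial one gives equality.

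In more detail, the steps I would carry out are: (i) recall from Theorem \ref{theoremaxiomA1bipartitepre} that $\sigma$ has a presentation as an infimum of pure tensors indexed by some family; (ii) use (\ref{pitensor=tensorpi1preminimal}) and (\ref{pitensor=tensorpi2preminimal}) to push the infimum inside, replacing each factor $\sigma_{i,A}$ by its set of maximal upper bounds $\underline{\sigma_{i,A}}_{{}_{\overline{ \mathfrak{S}}_A}}$ and similarly for $\sigma_{i,B}$, thereby writing $\sigma$ as an infimum of pure tensors $\alpha_j\widetilde{\otimes}\beta_j$ with $\alpha_j,\beta_j$ maximal; (iii) invoke Theorem \ref{theorempuretilde} to conclude each $\alpha_j\widetilde{\otimes}\beta_j\in{{{ \overline{S}}}}{}_{AB}^{\,{}^{pure}}$; (iv) note $\sigma\sqsubseteq_{{}_{{{\overline{S} }}_{AB}}}\alpha_j\widetilde{\otimes}\beta_j$ so that $\alpha_j\widetilde{\otimes}\beta_j\in\underline{\,\sigma\,}_{{}_{ {{{ \overline{S}}}}_{AB}}}$; (v) chain the inequalities $\sigma\sqsubseteq_{{}_{{{\overline{S} }}_{AB}}}\bigsqcap{}^{{}^{{{{ \overline{S}}}}_{AB}}}\underline{\,\sigma\,}_{{}_{ {{{ \overline{S}}}}_{AB}}}\sqsubseteq_{{}_{{{\overline{S} }}_{AB}}}\bigsqcap{}_{j\in J}\alpha_j\widetilde{\otimes}\beta_j=\sigma$ to force equality throughout.

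The one point that needs genuine care — and which I expect to be the main obstacle — is step (ii): justifying that pushing the infimum inside is legitimate when the index family is infinite and, more delicately, that after replacing $\sigma_{i,A}$ by $\underline{\sigma_{i,A}}_{{}_{\overline{ \mathfrak{S}}_A}}$ the resulting doubly-indexed infimum really does still equal $\sigma$ in ${{{ \overline{S}}}}_{AB}$ rather than just in some coarser quotient. This rests on the bimorphic properties (\ref{pitensor=tensorpi1preminimal})--(\ref{pitensor=tensorpi2preminimal}) holding for arbitrary (not merely finite) families, which is precisely how they are stated, and on associativity of infima in the down-complete Inf semi-lattice ${{{ \overline{S}}}}_{AB}$ (Theorem \ref{theoremaxiomA1bipartitepre}); one should also check, via the expansion (\ref{developmentetildeordersimplify}) of Lemma \ref{Lemmadevelopetildeordersimplify}, that no new relations collapse the doubly-indexed expression unexpectedly. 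Once that bookkeeping is done the argument is purely order-theoretic and closes immediately.
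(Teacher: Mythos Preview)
Your proposal is correct and follows essentially the same approach as the paper's own proof: write $\sigma$ as an infimum of pure tensors, use the bimorphic properties (\ref{pitensor=tensorpi1preminimal})--(\ref{pitensor=tensorpi2preminimal}) together with the generation of $\overline{\mathfrak{S}}_A,\overline{\mathfrak{S}}_B$ by their pure states to rewrite it as an infimum of pure tensors $\alpha\widetilde{\otimes}\beta$ with $\alpha,\beta$ pure, and then sandwich $\bigsqcap{}^{{}^{\overline{S}_{AB}}}\underline{\sigma}_{{}_{\overline{S}_{AB}}}$ between two copies of $\sigma$. Your concern about step (ii) is overcautious: the bimorphic identities are stated for arbitrary families and associativity of infima in the down-complete Inf semi-lattice suffices, so the paper dispatches that step in one line without any appeal to (\ref{developmentetildeordersimplify}).
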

\begin{proof}
Let us fix $\sigma \in {{{ \overline{S}}}}_{AB}$.\\ 
We note that $\sigma \sqsubseteq_{{}_{{{{ \overline{S}}}}_{AB}}} \sigma'$ for any $\sigma'\in ({{{ \overline{S}}}}{}_{AB}^{\,{}^{pure}} \cap (\uparrow^{{}^{{{{ \overline{S}}}}_{AB}}}\!\!\!\! \sigma)) $ and then $\sigma \sqsubseteq_{{}_{{{{ \overline{S}}}}_{AB}}} \bigsqcap{}^{{}^{{{{ \overline{S}}}}_{AB}}}  \underline{\sigma}_{{}_{ {{{ \overline{S}}}}_{AB}}}$. \\
Secondly,  denoting  
$\sigma:=(\bigsqcap{}^{{}^{{{{ \overline{S}}}}_{AB}}}_{i\in I} \sigma_{i,A}\widetilde{\otimes}  \sigma_{i,B})$, we note immediately that, for any 
$\sigma_{A} \in \overline{ \mathfrak{S}}_{A}^{pure}$ and $\sigma_{B} \in \overline{ \mathfrak{S}}_{B}^{pure}$, if 
$\sigma_{A}\sqsupseteq_{{}_{{ \mathfrak{S}}_{A}}} \sigma_{i,A}$ and $ \sigma_{B}\sqsupseteq_{{}_{{ \mathfrak{S}}_{B}}} \sigma_{i,B}$, then $(\sigma_{A}\widetilde{\otimes}  \sigma_{B}) \sqsupseteq_{{}_{{ \mathfrak{S}}_{AB}}} \sigma$, i.e. $(\sigma_{A}\widetilde{\otimes}  \sigma_{B}) \in \underline{\sigma}_{{}_{ {{{ \overline{S}}}}_{AB}}}$. As a consequence, we have
\begin{eqnarray}
(\bigsqcap{}^{{}^{{{{ \overline{S}}}}_{AB}}}_{i\in I}\bigsqcap{}^{{}^{{{{ \overline{S}}}}_{AB}}}_{\sigma_{A} \in \overline{ \mathfrak{S}}_{A}^{pure}\;\vert\; \sigma_{A}\sqsupseteq_{{}_{\overline{ \mathfrak{S}}_{A}}} \sigma_{i,A}}\bigsqcap{}^{{}^{{{{ \overline{S}}}}_{AB}}}_{\sigma_{B} \in \overline{ \mathfrak{S}}_{B}^{pure}\;\vert\; \sigma_{B}\sqsupseteq_{{}_{\overline{ \mathfrak{S}}_{B}}} \sigma_{i,B}} \sigma_{A}\widetilde{\otimes}  \sigma_{B}) \sqsupseteq_{{}_{{{{ \overline{S}}}}_{AB}}} \bigsqcap{}^{{}^{{{{ \overline{S}}}}_{AB}}}  \underline{\sigma}_{{}_{ {{{ \overline{S}}}}_{AB}}}.\;\;\;\;\;\;\;
\end{eqnarray}
Endly, we have
\begin{eqnarray}
\sigma &=&\bigsqcap{}^{{}^{{{{ \overline{S}}}}_{AB}}}_{i\in I} \sigma_{i,A}\widetilde{\otimes}  \sigma_{i,B} =  
\bigsqcap{}^{{}^{{{{ \overline{S}}}}_{AB}}}_{i\in I} (\bigsqcap{}^{{}^{{{{ \overline{S}}}}_{AB}}}_{\sigma_{A} \in \overline{ \mathfrak{S}}_{A}^{pure}\;\vert\; \sigma_{A}\sqsupseteq_{{}_{\overline{ \mathfrak{S}}_{A}}} \sigma_{i,A}}\sigma_{A})\widetilde{\otimes}  (\bigsqcap{}^{{}^{{{{ \overline{S}}}}_{AB}}}_{\sigma_{B} \in \overline{ \mathfrak{S}}_{B}^{pure}\;\vert\; \sigma_{B}\sqsupseteq_{{}_{\overline{ \mathfrak{S}}_{B}}} \sigma_{i,B}}\sigma_{B})\nonumber\\
&= & 
\bigsqcap{}^{{}^{{{{ \overline{S}}}}_{AB}}}_{i\in I}\bigsqcap{}^{{}^{{{{ \overline{S}}}}_{AB}}}_{\sigma_{A} \in \overline{ \mathfrak{S}}_{A}^{pure}\;\vert\; \sigma_{A}\sqsupseteq_{{}_{\overline{ \mathfrak{S}}_{A}}} \sigma_{i,A}}\bigsqcap{}^{{}^{{{{ \overline{S}}}}_{AB}}}_{\sigma_{B} \in \overline{ \mathfrak{S}}_{B}^{pure}\;\vert\; \sigma_{B}\sqsupseteq_{{}_{\overline{ \mathfrak{S}}_{B}}} \sigma_{i,B}} \sigma_{A}\widetilde{\otimes}  \sigma_{B}.
\end{eqnarray}
As a final conclusion, we obtain
\begin{eqnarray}
\sigma = (\bigsqcap{}^{{}^{{{{ \overline{S}}}}_{AB}}}_{i\in I}\bigsqcap{}^{{}^{{{{ \overline{S}}}}_{AB}}}_{\sigma_{A} \in \overline{ \mathfrak{S}}_{A}^{pure}\;\vert\; \sigma_{A}\sqsupseteq_{{}_{\overline{ \mathfrak{S}}_{A}}} \sigma_{i,A}}\bigsqcap{}^{{}^{{{{ \overline{S}}}}_{AB}}}_{\sigma_{B} \in \overline{ \mathfrak{S}}_{B}^{pure}\;\vert\; \sigma_{B}\sqsupseteq_{{}_{\overline{ \mathfrak{S}}_{B}}} \sigma_{i,B}} \sigma_{A}\widetilde{\otimes}  \sigma_{B}) = \bigsqcap{}^{{}^{{{{ \overline{S}}}}_{AB}}}  \underline{\sigma}_{{}_{ {{{ \overline{S}}}}_{AB}}}.\;\;\;\;\;\;\;\;\;\;\;\;\;\;
\end{eqnarray}
\end{proof}

We have then completed the first group of conditions about ${{{ \overline{S}}}}_{AB}$.\\

By the way, we can exploit the previous result to produce a formula for the suprema of elements of ${{{ \overline{S}}}}_{AB}$.

\begin{theorem}\label{formulacupStilde}
Let $\widetilde{\sigma}_{AB}$ and $\widetilde{\sigma}'_{AB}$ be two elements of ${{{ \overline{S}}}}_{AB}$ having a common upper-bound. Then the supremum of $\{\widetilde{\sigma}_{AB},\widetilde{\sigma}'_{AB}\}$ exists in ${{{ \overline{S}}}}_{AB}$ and its expression is given by
\begin{eqnarray}
\widetilde{\sigma}_{AB} \sqcup_{{}_{{{{ \overline{S}}}}_{AB}}} \widetilde{\sigma}'_{AB} = \bigsqcap{}^{{}^{{{{ \overline{S}}}}_{AB}}}_{\widetilde{\sigma} \in (\underline{\widetilde{\sigma}_{AB}}_{{}_{ {{{ \overline{S}}}}_{AB}}}\!\!\!\!\!\!\!\!\!\cap\; \underline{\widetilde{\sigma}'_{AB}}_{{}_{ {{{ \overline{S}}}}_{AB}}}\!\!\!\!\!\!\!\!\!)} \; \widetilde{\sigma}\label{formulacupPtilde2}
\end{eqnarray}
\end{theorem}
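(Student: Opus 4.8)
The statement asserts that in ${{{ \overline{S}}}}_{AB}$, the supremum of two elements with a common upper-bound is computed as the infimum over the common pure upper-bounds. The plan is to use Theorem~\ref{theoremA4Ptilde}, which says every element of ${{{ \overline{S}}}}_{AB}$ is the infimum of the pure states above it, together with the general order-theoretic fact that in a poset where every element is a meet of the maximal elements above it, the join (when it exists) is the meet of the common maximal upper-bounds. First I would establish existence: since ${{{ \overline{S}}}}_{AB}$ is a down-complete Inf semi-lattice (Theorem~\ref{theoremaxiomA1bipartitepre}) and $\widetilde{\sigma}_{AB},\widetilde{\sigma}'_{AB}$ have a common upper-bound by hypothesis, the set of common upper-bounds is nonempty and closed under the infimum operation, hence has a least element, which is $\widetilde{\sigma}_{AB} \sqcup_{{}_{{{{ \overline{S}}}}_{AB}}} \widetilde{\sigma}'_{AB}$.

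Next I would prove the two inequalities. For $\sqsubseteq$: every $\widetilde{\sigma} \in (\underline{\widetilde{\sigma}_{AB}}_{{}_{ {{{ \overline{S}}}}_{AB}}}\cap\; \underline{\widetilde{\sigma}'_{AB}}_{{}_{ {{{ \overline{S}}}}_{AB}}})$ is a pure state lying above both $\widetilde{\sigma}_{AB}$ and $\widetilde{\sigma}'_{AB}$, hence lies above their supremum; therefore $\widetilde{\sigma}_{AB} \sqcup_{{}_{{{{ \overline{S}}}}_{AB}}} \widetilde{\sigma}'_{AB}$ is a lower bound for the family on the right-hand side, giving $\widetilde{\sigma}_{AB} \sqcup_{{}_{{{{ \overline{S}}}}_{AB}}} \widetilde{\sigma}'_{AB} \sqsubseteq_{{}_{{{{ \overline{S}}}}_{AB}}} \bigsqcap{}^{{}^{{{{ \overline{S}}}}_{AB}}}_{\widetilde{\sigma}} \widetilde{\sigma}$. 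For the reverse $\sqsupseteq$: I would apply Theorem~\ref{theoremA4Ptilde} to the element $\widetilde{\sigma}_{AB} \sqcup_{{}_{{{{ \overline{S}}}}_{AB}}} \widetilde{\sigma}'_{AB}$ itself, writing it as the infimum of the pure states above it; every such pure state $\widetilde{\sigma}$ satisfies $\widetilde{\sigma} \sqsupseteq_{{}_{{{{ \overline{S}}}}_{AB}}} \widetilde{\sigma}_{AB} \sqcup_{{}_{{{{ \overline{S}}}}_{AB}}} \widetilde{\sigma}'_{AB} \sqsupseteq_{{}_{{{{ \overline{S}}}}_{AB}}} \widetilde{\sigma}_{AB}, \widetilde{\sigma}'_{AB}$, so it belongs to $\underline{\widetilde{\sigma}_{AB}}_{{}_{ {{{ \overline{S}}}}_{AB}}}\cap\; \underline{\widetilde{\sigma}'_{AB}}_{{}_{ {{{ \overline{S}}}}_{AB}}}$. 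Hence the meet over that intersection is taken over a subset of the pure states above the supremum, and thus is $\sqsupseteq_{{}_{{{{ \overline{S}}}}_{AB}}} \widetilde{\sigma}_{AB} \sqcup_{{}_{{{{ \overline{S}}}}_{AB}}} \widetilde{\sigma}'_{AB}$. Combining both directions and using antisymmetry of $\sqsubseteq_{{}_{{{{ \overline{S}}}}_{AB}}}$ gives the formula.

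The main subtlety — and the only place where care is needed — is ensuring that the right-hand side meet $\bigsqcap{}^{{}^{{{{ \overline{S}}}}_{AB}}}_{\widetilde{\sigma}} \widetilde{\sigma}$ is well-defined, i.e. that the indexing set $\underline{\widetilde{\sigma}_{AB}}_{{}_{ {{{ \overline{S}}}}_{AB}}}\cap\; \underline{\widetilde{\sigma}'_{AB}}_{{}_{ {{{ \overline{S}}}}_{AB}}}$ is nonempty; this follows at once from the existence of the common upper-bound, since by Theorem~\ref{theoremA4Ptilde} any common upper-bound dominates some common pure upper-bound. Down-completeness of ${{{ \overline{S}}}}_{AB}$ then guarantees the infimum exists. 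I expect no serious obstacle here: the argument is a direct packaging of the generating-by-pure-states property with the universal property of suprema, and everything needed has already been proven in the preceding theorems.
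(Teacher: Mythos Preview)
Your approach is essentially the paper's: use Theorem~\ref{theoremA4Ptilde} to write everything as a meet of pure states and compare $\underline{\widetilde{\sigma}_{AB}}\cap\underline{\widetilde{\sigma}'_{AB}}$ with the pure states above the supremum. There is one slip in your ``reverse $\sqsupseteq$'' paragraph: the inclusion you establish is $\underline{\widetilde{\sigma}_{AB}\sqcup\widetilde{\sigma}'_{AB}}\subseteq \underline{\widetilde{\sigma}_{AB}}\cap\underline{\widetilde{\sigma}'_{AB}}$, so the meet on the right is taken over a \emph{superset} (not a subset) of the pures above the supremum, yielding $\bigsqcap_{\widetilde{\sigma}\in\text{intersection}}\widetilde{\sigma}\sqsubseteq \widetilde{\sigma}_{AB}\sqcup\widetilde{\sigma}'_{AB}$ --- which is precisely the inequality that direction needs, so the argument closes once the direction is corrected.
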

\begin{proof} As long as $\widetilde{\sigma}_{AB}$ and $\widetilde{\sigma}'_{AB}$ have a common upper-bound, $\underline{\widetilde{\sigma}_{AB}}\cap \underline{\widetilde{\sigma}'_{AB}}$ is not empty. Secondly, it is clear that $\widetilde{\sigma}_{AB}= (\bigsqcap{}^{{}^{{{{ \overline{S}}}}_{AB}}}_{\widetilde{\sigma} \in \underline{\widetilde{\sigma}_{AB}}} \; \widetilde{\sigma})\; \sqsubseteq_{{}_{{{{ \overline{S}}}}_{AB}}} \bigsqcap{}^{{}^{{{{ \overline{S}}}}_{AB}}}_{\widetilde{\sigma} \in \underline{\widetilde{\sigma}_{AB}}\cap \underline{\widetilde{\sigma}'_{AB}}} \; \widetilde{\sigma}$ and $\widetilde{\sigma}'_{AB}= (\bigsqcap{}^{{}^{{{{ \overline{S}}}}_{AB}}}_{\widetilde{\sigma} \in \underline{\widetilde{\sigma}'_{AB}}} \; \widetilde{\sigma})\; \sqsubseteq_{{}_{{{{ \overline{S}}}}_{AB}}} \bigsqcap{}^{{}^{{{{ \overline{S}}}}_{AB}}}_{\widetilde{\sigma} \in \underline{\widetilde{\sigma}_{AB}}\cap \underline{\widetilde{\sigma}'_{AB}}} \; \widetilde{\sigma}$. Then, if we suppose there exists $\widetilde{\sigma}''_{AB}$ such that $\widetilde{\sigma}_{AB}, \widetilde{\sigma}'_{AB} \sqsubseteq_{{}_{{{{ \overline{S}}}}_{AB}}} \widetilde{\sigma}''_{AB}$ we can use Theorem \ref{theoremA4Ptilde} to obtain the decomposition $\widetilde{\sigma}''_{AB}=(\bigsqcap{}^{{}^{{{{ \overline{S}}}}_{AB}}}_{\widetilde{\sigma} \in \underline{\widetilde{\sigma}''_{AB}}} \; \widetilde{\sigma})$ with necessarily $\forall \widetilde{\sigma}\in \underline{\widetilde{\sigma}''_{AB}},$ $\widetilde{\sigma}_{AB} \sqsubseteq_{{}_{{{{ \overline{S}}}}_{AB}}}\widetilde{\sigma}$ and $\widetilde{\sigma}'_{AB} \sqsubseteq_{{}_{{{{ \overline{S}}}}_{AB}}}\widetilde{\sigma}$, i.e. $\widetilde{\sigma}\in  \underline{\widetilde{\sigma}_{AB}}\cap \underline{\widetilde{\sigma}'_{AB}}$, and then $(\bigsqcap{}^{{}^{{{{ \overline{S}}}}_{AB}}}_{\widetilde{\sigma} \in \underline{\widetilde{\sigma}_{AB}}\cap \underline{\widetilde{\sigma}'_{AB}}} \; \widetilde{\sigma}) \sqsubseteq_{{}_{{{{ \overline{S}}}}_{AB}}} \widetilde{\sigma}''_{AB}$.  \end{proof}

We now intent to show that ${{{ \overline{S}}}}_{AB}$ fulfills the second group of conditions necessary to build its "ontic completion" (the conditions on the star map).\\

Let us first establish a result about atomicity of ${{{ \overline{S}}}}_{AB}$.  We recall that $\overline{ \mathfrak{S}}_{A}$ and $\overline{ \mathfrak{S}}_{B}$ are both atomic. 

\begin{lemma}
$\overline{ \mathfrak{S}}_{A}$ and $\overline{ \mathfrak{S}}_{B}$ being atomic, then ${{{ \overline{S}}}}_{AB}$ is also atomic, i.e. 
\begin{eqnarray}
\exists { \mathcal{A}}_{{{{ \overline{S}}}}_{AB}}\subseteq {{{ \overline{S}}}}_{AB} & \vert &\forall \alpha_{AB}\in { \mathcal{A}}_{{{{ \overline{S}}}}_{AB}},\;\; (\bot_{{}_{\overline{ \mathfrak{S}}_{A}}}\widetilde{\otimes} \bot_{{}_{\overline{ \mathfrak{S}}_{B}}}) \sqcoversubset_{{}_{{{{ \overline{S}}}}_{AB}}} \alpha_{AB},\\
&&\forall \sigma_{AB} \in {{{ \overline{S}}}}_{AB}, \;\; \exists \alpha_{AB}\in { \mathcal{A}}_{{{{ \overline{S}}}}_{AB}}\;\vert\; \alpha_{AB}\sqsubseteq_{{}_{{{{ \overline{S}}}}_{AB}}} \sigma_{AB}.\;\;\;\;\;\;\;\;\;\;\;\;\;\;
\end{eqnarray}
%Here, we denote $\sigma \sqcoversubset \sigma'$ iff $(\sigma \sqsubset \sigma'\;\textit{\rm and} (\sigma \sqsubseteq \sigma''\sqsubseteq \sigma' \;\Leftrightarrow\; (\sigma = \sigma'' \;\textit{\rm or}\; \sigma''=\sigma')))$.\\
The set of atoms of ${{{ \overline{S}}}}_{AB}$ is indeed defined by
\begin{eqnarray}
{ \mathcal{A}}_{{{{ \overline{S}}}}_{AB}} & := & \{\, (\alpha_A \widetilde{\otimes}\bot_{{}_{{ \mathfrak{S}}_{B}}})\sqcap_{{}_{{{{ \overline{S}}}}_{AB}}} (\bot_{{}_{{ \mathfrak{S}}_{A}}}\widetilde{\otimes} \alpha_B)\;\vert\; \alpha_A\in { \mathcal{A}}_{\overline{ \mathfrak{S}}_{A}},\; \alpha_B\in { \mathcal{A}}_{\overline{ \mathfrak{S}}_{B}}\,\}.
\end{eqnarray}
\end{lemma}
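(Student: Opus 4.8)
The plan is to reduce everything to the expansion formula (\ref{developmentetildeordersimplify}) of Lemma~\ref{Lemmadevelopetildeleqetilde}. First I would record, using the expression (\ref{tensorinfSAB}) for infima in ${{{ \overline{S}}}}_{AB}$, that for $\alpha_A\in { \mathcal{A}}_{\overline{ \mathfrak{S}}_{A}}$ and $\alpha_B\in { \mathcal{A}}_{\overline{ \mathfrak{S}}_{B}}$ the candidate atom
\[
\alpha_{AB}:=(\alpha_A \widetilde{\otimes}\bot_{{}_{{ \mathfrak{S}}_{B}}})\sqcap_{{}_{{{{ \overline{S}}}}_{AB}}} (\bot_{{}_{{ \mathfrak{S}}_{A}}}\widetilde{\otimes} \alpha_B)
\]
equals the class $\widetilde{\{(\alpha_A,\bot_{{}_{{ \mathfrak{S}}_{B}}}),(\bot_{{}_{{ \mathfrak{S}}_{A}}},\alpha_B)\}}$. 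Applying the expansion to this two-element family, the two infimum-conditions become trivial (because $\alpha_\bullet\sqcap\bot=\bot$) and the split $K=\{2\}$ is automatic, so that for any pure tensor one gets $\alpha_{AB}\sqsubseteq_{{}_{{{{ \overline{S}}}}_{AB}}}\sigma_A\widetilde{\otimes}\sigma_B$ iff $\alpha_A\sqsubseteq_{{}_{\overline{ \mathfrak{S}}_{A}}}\sigma_A$ or $\alpha_B\sqsubseteq_{{}_{\overline{ \mathfrak{S}}_{B}}}\sigma_B$. Since every $\sigma\in{{{ \overline{S}}}}_{AB}$ is, by construction, an infimum $\bigsqcap{}^{{}^{{{{ \overline{S}}}}_{AB}}}_{i\in I}\sigma_{i,A}\widetilde{\otimes}\sigma_{i,B}$, this yields the working criterion: $\alpha_{AB}\sqsubseteq_{{}_{{{{ \overline{S}}}}_{AB}}}\sigma$ iff, for every $i\in I$, $\alpha_A\sqsubseteq_{{}_{\overline{ \mathfrak{S}}_{A}}}\sigma_{i,A}$ or $\alpha_B\sqsubseteq_{{}_{\overline{ \mathfrak{S}}_{B}}}\sigma_{i,B}$.

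Second, I would show that every state $\sigma\neq\bot_{{}_{{{{ \overline{S}}}}_{AB}}}$ dominates some such $\alpha_{AB}$. Writing $\sigma=\bigsqcap{}^{{}^{{{{ \overline{S}}}}_{AB}}}_{i\in I}\sigma_{i,A}\widetilde{\otimes}\sigma_{i,B}$ and expanding $\sigma\sqsubseteq_{{}_{{{{ \overline{S}}}}_{AB}}}\bot_{{}_{{{{ \overline{S}}}}_{AB}}}=\bot_{{}_{{ \mathfrak{S}}_{A}}}\widetilde{\otimes}\bot_{{}_{{ \mathfrak{S}}_{B}}}$, the state $\sigma$ equals $\bot_{{}_{{{{ \overline{S}}}}_{AB}}}$ iff $\bigsqcap{}^{{}^{\overline{ \mathfrak{S}}_{A}}}_{i}\sigma_{i,A}=\bot_{{}_{{ \mathfrak{S}}_{A}}}$, $\bigsqcap{}^{{}^{\overline{ \mathfrak{S}}_{B}}}_{i}\sigma_{i,B}=\bot_{{}_{{ \mathfrak{S}}_{B}}}$, and for every $\varnothing\varsubsetneq K\varsubsetneq I$ one has $\bigsqcap{}^{{}^{\overline{ \mathfrak{S}}_{A}}}_{k\in K}\sigma_{k,A}=\bot_{{}_{{ \mathfrak{S}}_{A}}}$ or $\bigsqcap{}^{{}^{\overline{ \mathfrak{S}}_{B}}}_{m\in I-K}\sigma_{m,B}=\bot_{{}_{{ \mathfrak{S}}_{B}}}$. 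If $\sigma\neq\bot_{{}_{{{{ \overline{S}}}}_{AB}}}$, at least one clause fails, and I distinguish three cases, using the atomicity of $\overline{ \mathfrak{S}}_{A}$ and $\overline{ \mathfrak{S}}_{B}$: (i) if $\bigsqcap{}^{{}^{\overline{ \mathfrak{S}}_{A}}}_{i}\sigma_{i,A}\neq\bot_{{}_{{ \mathfrak{S}}_{A}}}$, pick an atom $\alpha_A$ below it and any atom $\alpha_B$; (ii) symmetric with $A$ and $B$ exchanged; (iii) if both total infima vanish but some $\varnothing\varsubsetneq K\varsubsetneq I$ violates the split condition, pick atoms $\alpha_A\sqsubseteq_{{}_{\overline{ \mathfrak{S}}_{A}}}\bigsqcap{}^{{}^{\overline{ \mathfrak{S}}_{A}}}_{k\in K}\sigma_{k,A}$ and $\alpha_B\sqsubseteq_{{}_{\overline{ \mathfrak{S}}_{B}}}\bigsqcap{}^{{}^{\overline{ \mathfrak{S}}_{B}}}_{m\in I-K}\sigma_{m,B}$. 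In each case the per-index criterion of the first step holds for all $i\in I$ (in case (iii) split according to $i\in K$ or $i\in I-K$), so $\alpha_{AB}\sqsubseteq_{{}_{{{{ \overline{S}}}}_{AB}}}\sigma$. In particular any atom of ${{{ \overline{S}}}}_{AB}$ lies above some element of ${ \mathcal{A}}_{{{{ \overline{S}}}}_{AB}}$.

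Third, I would prove that each $\alpha_{AB}$ covers $\bot_{{}_{{{{ \overline{S}}}}_{AB}}}$. Non-triviality is immediate from the first-step criterion applied with both coordinates $\bot$: since $\alpha_A\neq\bot_{{}_{{ \mathfrak{S}}_{A}}}$ and $\alpha_B\neq\bot_{{}_{{ \mathfrak{S}}_{B}}}$, we get $\alpha_{AB}\neq\bot_{{}_{{{{ \overline{S}}}}_{AB}}}$. For the covering property, suppose $\bot_{{}_{{{{ \overline{S}}}}_{AB}}}\sqsubset_{{}_{{{{ \overline{S}}}}_{AB}}}\gamma\sqsubseteq_{{}_{{{{ \overline{S}}}}_{AB}}}\alpha_{AB}$; by the second step there is $\alpha'_{AB}\in{ \mathcal{A}}_{{{{ \overline{S}}}}_{AB}}$, built from atoms $\alpha'_A,\alpha'_B$, with $\alpha'_{AB}\sqsubseteq_{{}_{{{{ \overline{S}}}}_{AB}}}\gamma\sqsubseteq_{{}_{{{{ \overline{S}}}}_{AB}}}\alpha_{AB}$. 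Because $\alpha_{AB}$ is the infimum of $\alpha_A\widetilde{\otimes}\bot_{{}_{{ \mathfrak{S}}_{B}}}$ and $\bot_{{}_{{ \mathfrak{S}}_{A}}}\widetilde{\otimes}\alpha_B$, we have $\alpha'_{AB}\sqsubseteq_{{}_{{{{ \overline{S}}}}_{AB}}}\alpha_A\widetilde{\otimes}\bot_{{}_{{ \mathfrak{S}}_{B}}}$ and $\alpha'_{AB}\sqsubseteq_{{}_{{{{ \overline{S}}}}_{AB}}}\bot_{{}_{{ \mathfrak{S}}_{A}}}\widetilde{\otimes}\alpha_B$; expanding the first inequality, the split $K=\{1\}$ together with $\alpha'_B\neq\bot_{{}_{{ \mathfrak{S}}_{B}}}$ forces $\alpha'_A\sqsubseteq_{{}_{\overline{ \mathfrak{S}}_{A}}}\alpha_A$, hence $\alpha'_A=\alpha_A$ by atomicity; the second inequality gives $\alpha'_B=\alpha_B$ the same way. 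Thus $\alpha'_{AB}=\alpha_{AB}$, so $\alpha_{AB}\sqsubseteq_{{}_{{{{ \overline{S}}}}_{AB}}}\gamma\sqsubseteq_{{}_{{{{ \overline{S}}}}_{AB}}}\alpha_{AB}$, i.e.\ $\gamma=\alpha_{AB}$. Combining the three steps: ${ \mathcal{A}}_{{{{ \overline{S}}}}_{AB}}$ is exactly the set of atoms of ${{{ \overline{S}}}}_{AB}$, and every nonzero state dominates an atom, so ${{{ \overline{S}}}}_{AB}$ is atomic.

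The only genuinely delicate point is the case analysis of the second step when $I$ is infinite: one must check that negating the conjunction characterizing $\bot_{{}_{{{{ \overline{S}}}}_{AB}}}$ always lands in one of the three usable situations, and that in case (iii) the chosen pair $(\alpha_A,\alpha_B)$ satisfies the per-index criterion for \emph{every} $i\in I$. This is purely a matter of organizing the quantifiers produced by (\ref{developmentetildeordersimplify}); I do not expect a substantive obstacle beyond this bookkeeping (the statement tacitly assumes $\overline{ \mathfrak{S}}_{A}$ and $\overline{ \mathfrak{S}}_{B}$ are nondegenerate, as is the case for the one-dimensional indeterministic spaces of states of interest).
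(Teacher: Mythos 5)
Your proof is correct and follows essentially the same route as the paper: all three steps (non-triviality of the candidate atoms, domination of every nonzero state via the case split on the negation of the $\bot$-characterization, and the covering property by sandwiching with a second atom) rest on the expansion formula (\ref{developmentetildeordersimplify}), exactly as in the paper's argument. Your preliminary "working criterion" is just a convenient repackaging of that formula for the two-element family $\{(\alpha_A,\bot),(\bot,\alpha_B)\}$, not a different method.
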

\begin{proof}
Using the expansion (\ref{developmentetildeordersimplify}), we deduce immediately 
\begin{eqnarray}
\forall \alpha_A\in { \mathcal{A}}_{\overline{ \mathfrak{S}}_{A}},\forall \alpha_B\in { \mathcal{A}}_{\overline{ \mathfrak{S}}_{B}}, && (\alpha_A \widetilde{\otimes}\bot_{{}_{\overline{ \mathfrak{S}}_{B}}})\sqcap_{{}_{{{{ \overline{S}}}}_{AB}}} (\bot_{{}_{\overline{ \mathfrak{S}}_{A}}}\widetilde{\otimes} \alpha_B) \;\not\sqsubseteq_{{}_{{{{ \overline{S}}}}_{AB}}} \; \bot_{{}_{\overline{ \mathfrak{S}}_{A}}} \widetilde{\otimes}\bot_{{}_{\overline{ \mathfrak{S}}_{B}}}.\;\;\;\;\;\;\;\;\;\;\;\;\;\;
\end{eqnarray}
In other words,  $\bot_{{}_{\overline{ \mathfrak{S}}_{A}}} \widetilde{\otimes}\bot_{{}_{\overline{ \mathfrak{S}}_{B}}} \sqsubset_{{}_{{{{ \overline{S}}}}_{AB}}}  
(\alpha_A \widetilde{\otimes}\bot_{{}_{\overline{ \mathfrak{S}}_{B}}})\sqcap_{{}_{{{{ \overline{S}}}}_{AB}}} (\bot_{{}_{\overline{ \mathfrak{S}}_{A}}}\widetilde{\otimes} \alpha_B)$.\\

Secondly, let us show that, for any $\sigma_{AB}:=(\bigsqcap{}^{{}^{{{{ \overline{S}}}}_{AB}}}_{i\in I} \sigma_{i,A}\widetilde{\otimes} \sigma_{i,B})$ distinct from $\bot_{{}_{\overline{ \mathfrak{S}}_{A}}} \widetilde{\otimes}\bot_{{}_{\overline{ \mathfrak{S}}_{B}}}$, there exist $\alpha_A\in { \mathcal{A}}_{\overline{ \mathfrak{S}}_{A}}$ and $\alpha_B\in { \mathcal{A}}_{\overline{ \mathfrak{S}}_{B}}$ such that $((\alpha_A \widetilde{\otimes}\bot_{{}_{\overline{ \mathfrak{S}}_{B}}})\sqcap_{{}_{{{{ \overline{S}}}}_{AB}}} (\bot_{{}_{\overline{ \mathfrak{S}}_{A}}}\widetilde{\otimes} \alpha_B)) \sqsubseteq_{{}_{{{{ \overline{S}}}}_{AB}}}  \sigma_{AB}$. Using once again the expansion (\ref{developmentetildeordersimplify}), we know that $\sigma_{AB} \sqsupset_{{}_{{{{ \overline{S}}}}_{AB}}}  
\bot_{{}_{\overline{ \mathfrak{S}}_{A}}} \widetilde{\otimes}\bot_{{}_{\overline{ \mathfrak{S}}_{B}}}$ (or, in other words, $\sigma_{AB} \not\sqsubseteq_{{}_{{{{ \overline{S}}}}_{AB}}}  
\bot_{{}_{\overline{ \mathfrak{S}}_{A}}} \widetilde{\otimes}\bot_{{}_{\overline{ \mathfrak{S}}_{B}}}$) implies that there exists $\varnothing \subseteq K \subseteq I$ such that $(\bigsqcap{}^{{}_{\overline{ \mathfrak{S}}_{A}}}_{k\in K}\; \sigma_{k,A}) \;\sqsupset_{{}_{\overline{ \mathfrak{S}}_{A}}} \bot_{{}_{\overline{ \mathfrak{S}}_{A}}}$ and $(\bigsqcap{}^{{}_{\overline{ \mathfrak{S}}_{B}}}_{m\in I-K} \;\sigma_{m,B})\; \sqsupset_{{}_{\overline{ \mathfrak{S}}_{B}}} \bot_{{}_{\overline{ \mathfrak{S}}_{B}}} $. Let us fix such a $K$ and let us choose $\alpha_A\in { \mathcal{A}}_{\overline{ \mathfrak{S}}_{A}}$ and $\alpha_B\in { \mathcal{A}}_{\overline{ \mathfrak{S}}_{B}}$ such that $(\bigsqcap{}^{{}_{\overline{ \mathfrak{S}}_{A}}}_{k\in K}\; \sigma_{k,A}) \;\sqsupseteq_{{}_{\overline{ \mathfrak{S}}_{A}}} \alpha_A$ and $(\bigsqcap{}^{{}_{\overline{ \mathfrak{S}}_{B}}}_{m\in I-K} \;\sigma_{m,B})\; \sqsupseteq_{{}_{\overline{ \mathfrak{S}}_{B}}} \alpha_B$.  We obtain $(\bigsqcap{}^{{}^{{{{ \overline{S}}}}_{AB}}}_{i\in K} \sigma_{i,A}\widetilde{\otimes} \sigma_{i,B}) \;\sqsupseteq_{{}_{{{{ \overline{S}}}}_{AB}}} (\alpha_A \widetilde{\otimes} \bot_{{}_{\overline{ \mathfrak{S}}_{B}}})$ and $(\bigsqcap{}^{{}^{{{{ \overline{S}}}}_{AB}}}_{i\in I-K} \sigma_{i,A}\widetilde{\otimes} \sigma_{i,B}) \;\sqsupseteq_{{}_{{{{ \overline{S}}}}_{AB}}} (\bot_{{}_{\overline{ \mathfrak{S}}_{A}}} \widetilde{\otimes} \alpha_B)$. As a first conclusion, we obtain $((\alpha_A \widetilde{\otimes}\bot_{{}_{\overline{ \mathfrak{S}}_{B}}})\sqcap_{{}_{{{{ \overline{S}}}}_{AB}}} (\bot_{{}_{\overline{ \mathfrak{S}}_{A}}}\widetilde{\otimes} \alpha_B)) \sqsubseteq_{{}_{{{{ \overline{S}}}}_{AB}}}  \sigma_{AB}$. \\

Thirdly, let us consider $\sigma_{AB}:=(\bigsqcap{}^{{}^{{{{ \overline{S}}}}_{AB}}}_{i\in I} \sigma_{i,A}\widetilde{\otimes} \sigma_{i,B})$ such that $\sigma_{AB} \sqsubseteq_{{}_{{{{ \overline{S}}}}_{AB}}}  
(\alpha_A \widetilde{\otimes}\bot_{{}_{\overline{ \mathfrak{S}}_{B}}})\sqcap_{{}_{{{{ \overline{S}}}}_{AB}}} (\bot_{{}_{\overline{ \mathfrak{S}}_{A}}}\widetilde{\otimes} \alpha_B)$. As a first case, we may have obviously $\sigma_{AB}=\bot_{{}_{\overline{ \mathfrak{S}}_{A}}} \widetilde{\otimes}\bot_{{}_{\overline{ \mathfrak{S}}_{B}}}$. If however $\sigma_{AB}\not= \bot_{{}_{\overline{ \mathfrak{S}}_{A}}} \widetilde{\otimes}\bot_{{}_{\overline{ \mathfrak{S}}_{B}}}$, the previous result implies that there exist $\alpha'_A\in { \mathcal{A}}_{\overline{ \mathfrak{S}}_{A}}$ and $\alpha'_B\in { \mathcal{A}}_{\overline{ \mathfrak{S}}_{B}}$ such that $((\alpha'_A \widetilde{\otimes}\bot_{{}_{\overline{ \mathfrak{S}}_{B}}})\sqcap_{{}_{{{{ \overline{S}}}}_{AB}}} (\bot_{{}_{\overline{ \mathfrak{S}}_{A}}}\widetilde{\otimes} \alpha'_B)) \sqsubseteq_{{}_{{{{ \overline{S}}}}_{AB}}}  \sigma_{AB}$.  Using once again the expansion (\ref{developmentetildeordersimplify}), we deduce immediately that $\alpha_A=\alpha'_A$ and $\alpha_B=\alpha'_B$. As a result, we obtain
\begin{eqnarray}
&&\hspace{-1cm} \sigma_{AB} \sqsubseteq_{{}_{{{{ \overline{S}}}}_{AB}}}  
(\alpha_A \widetilde{\otimes}\bot_{{}_{\overline{ \mathfrak{S}}_{B}}})\sqcap_{{}_{{{{ \overline{S}}}}_{AB}}} (\bot_{{}_{\overline{ \mathfrak{S}}_{A}}}\widetilde{\otimes} \alpha_B) \Rightarrow \nonumber\\
&&\hspace{1cm} \left( \sigma_{AB}=\bot_{{}_{\overline{ \mathfrak{S}}_{A}}} \widetilde{\otimes}\bot_{{}_{\overline{ \mathfrak{S}}_{B}}} \;\;\textit{\rm or}\;\; \sigma_{AB}=(\alpha_A \widetilde{\otimes}\bot_{{}_{\overline{ \mathfrak{S}}_{B}}})\sqcap_{{}_{{{{ \overline{S}}}}_{AB}}} (\bot_{{}_{\overline{ \mathfrak{S}}_{A}}}\widetilde{\otimes} \alpha_B) \right).\;\;\;\;\;\;\;\;\;\;\;\;\;\;
\end{eqnarray}
As a second conclusion, we then obtain $\bot_{{}_{{{{ \overline{S}}}}_{AB}}}\sqcoversubset_{{}_{{{{ \overline{S}}}}_{AB}}}  
(\alpha_A \widetilde{\otimes}\bot_{{}_{\overline{ \mathfrak{S}}_{B}}})\sqcap_{{}_{{{{ \overline{S}}}}_{AB}}} (\bot_{{}_{\overline{ \mathfrak{S}}_{A}}}\widetilde{\otimes} \alpha_B)$.
\end{proof}

\begin{definition}
We define the map $\star$ from ${{{{ \overline{S}}}}_{AB}}\smallsetminus \{\bot_{{}_{\overline{ \mathfrak{S}}_{A}}}\widetilde{\otimes} \bot_{{}_{\overline{ \mathfrak{S}}_{B}}} \}$ to itself as follows. 
\begin{eqnarray}
\forall \alpha\in \overline{ \mathfrak{S}}{}_{A}^{{}^{pure}},\beta \in \overline{ \mathfrak{S}}{}_{B}^{{}^{pure}} ,&& (\alpha\widetilde{\otimes}\beta)^\star := \alpha^\star\widetilde{\otimes} \bot_{{}_{\overline{ \mathfrak{S}}_{B}}} \sqcap_{{}_{{{{{ \overline{S}}}}_{AB}}}} \bot_{{}_{\overline{ \mathfrak{S}}_{A}}}\widetilde{\otimes} \beta^\star\\
\forall \sigma\in {{{{ \overline{S}}}}_{AB}}\smallsetminus \{\bot_{{}_{\overline{ \mathfrak{S}}_{A}}}\widetilde{\otimes} \bot_{{}_{\overline{ \mathfrak{S}}_{B}}} \},&& \sigma^\star := \bigsqcup{}^{{}^{{{{{ \overline{S}}}}_{AB}}}}_{\omega\;\in \;\underline{\, \sigma \,}_{{}_{{{{{ \overline{S}}}}_{AB}}}}} \omega^\star.
\end{eqnarray}
 Let $\sigma$ be an element of ${{{{ \overline{S}}}}_{AB}}$.
\end{definition}

\begin{theorem}
The map $\star$ is involutive and order-reversing and satisfies $\forall \sigma\in {{{ \overline{S}}}}_{AB}\smallsetminus \{\bot_{{}_{{{{ \overline{S}}}}_{AB}}}\},\;\;\;\; \neg \;\widehat{\sigma^{\star}\sigma}{}^{{}^{{{{{ \overline{S}}}}_{AB}}}}$.
\end{theorem}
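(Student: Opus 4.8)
The plan is to verify the three properties of $\star$ on ${{{ \overline{S}}}}_{AB}$ in the order: (i) involutive, (ii) order-reversing, (iii) $\neg\,\widehat{\sigma^\star\sigma}{}^{{}^{{{{ \overline{S}}}}_{AB}}}$, and for each property to first reduce from general elements to pure tensors (atoms) using Theorem \ref{theoremA4Ptilde} — which tells us every $\sigma\in{{{ \overline{S}}}}_{AB}$ is the infimum of the pure states above it — together with the definition $\sigma^\star:=\bigsqcup^{{{{ \overline{S}}}}_{AB}}_{\omega\in\underline{\sigma}_{{}_{{{{ \overline{S}}}}_{AB}}}}\omega^\star$. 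The key observation underlying all the reductions is that, by the previous atomicity lemma, the atoms of ${{{ \overline{S}}}}_{AB}$ are precisely the elements $(\alpha_A\widetilde{\otimes}\bot_{{}_{\overline{ \mathfrak{S}}_{B}}})\sqcap_{{}_{{{{ \overline{S}}}}_{AB}}}(\bot_{{}_{\overline{ \mathfrak{S}}_{A}}}\widetilde{\otimes}\alpha_B)$, and by Theorem \ref{theorempuretilde} the pure states are the $\alpha\widetilde{\otimes}\beta$ with $\alpha\in\overline{ \mathfrak{S}}{}_A^{{}^{pure}}$, $\beta\in\overline{ \mathfrak{S}}{}_B^{{}^{pure}}$; so $(\alpha\widetilde{\otimes}\beta)^\star$ by definition equals $\alpha^\star\widetilde{\otimes}\bot_{{}_{\overline{ \mathfrak{S}}_{B}}}\sqcap_{{}_{{{{ \overline{S}}}}_{AB}}}\bot_{{}_{\overline{ \mathfrak{S}}_{A}}}\widetilde{\otimes}\beta^\star$, i.e.\ an atom built from the atoms $\alpha^\star,\beta^\star$.

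For (i), I would compute $((\alpha\widetilde{\otimes}\beta)^\star)^\star$. Writing $a:=\alpha^\star\widetilde{\otimes}\bot_{{}_{\overline{ \mathfrak{S}}_{B}}}\sqcap_{{}_{{{{ \overline{S}}}}_{AB}}}\bot_{{}_{\overline{ \mathfrak{S}}_{A}}}\widetilde{\otimes}\beta^\star$, one needs $\underline{a}_{{}_{{{{ \overline{S}}}}_{AB}}}$, the pure states above $a$; by the expansion formula (\ref{developmentetildeordersimplify}) applied to the two-term infimum, $\alpha'\widetilde{\otimes}\beta'\sqsupseteq a$ iff $\alpha'\sqsupseteq\alpha^\star\sqcap_{{}_{\overline{ \mathfrak{S}}_{A}}}\bot=\bot$ is automatic and the "or" conditions force $\alpha'\sqsupseteq\alpha^\star$ or $\beta'\sqsupseteq\beta^\star$ — more precisely one gets $\alpha'\sqsupseteq\alpha^\star$ and $\beta'\sqsupseteq\beta^\star$ is \emph{not} required; a careful reading of (\ref{developmentetildeordersimplify}) with $I$ of size two shows $\underline{a}=\{\alpha'\widetilde{\otimes}\beta'\;\vert\;\alpha'\sqsupseteq\alpha^\star\}\cup\{\alpha'\widetilde{\otimes}\beta'\;\vert\;\beta'\sqsupseteq\beta^\star\}$ intersected appropriately. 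Then $a^\star=\bigsqcup^{{{{ \overline{S}}}}_{AB}}_{\omega\in\underline{a}}\omega^\star$; using the supremum formula (\ref{formulacupPtilde2}) and the involutivity of $\star$ on $\overline{ \mathfrak{S}}_A$, $\overline{ \mathfrak{S}}_B$ (properties (\ref{propstar1})), this collapses back to $\alpha\widetilde{\otimes}\beta$. Having $((\alpha\widetilde{\otimes}\beta)^\star)^\star=\alpha\widetilde{\otimes}\beta$ on pure states, the general case follows because $\star$ is defined as a supremum over $\underline{\sigma}$ and the double-$\star$ therefore distributes; one checks $\underline{\sigma^\star}$ consists exactly of the $\omega^\star$ for $\omega\in\underline{\sigma}$ — this is the technical heart of the argument and is where I expect the main obstacle.

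For (ii), order-reversing: if $\sigma_1\sqsubseteq\sigma_2$ then $\underline{\sigma_1}\supseteq\underline{\sigma_2}$, so $\sigma_2^\star=\bigsqcup_{\omega\in\underline{\sigma_2}}\omega^\star\sqsubseteq\bigsqcup_{\omega\in\underline{\sigma_1}}\omega^\star=\sigma_1^\star$, using only monotonicity of suprema in the semilattice — this step is essentially immediate once the "$\underline{\sigma^\star}$ equals $\star(\underline\sigma)$" bookkeeping from (i) is in place. For (iii), $\neg\,\widehat{\sigma^\star\sigma}$: suppose $\sigma$ and $\sigma^\star$ had a common upper bound; then some pure state $\alpha\widetilde{\otimes}\beta$ lies in $\underline{\sigma}\cap\underline{\sigma^\star}$. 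From $\alpha\widetilde{\otimes}\beta\sqsupseteq\sigma^\star=\bigsqcup_{\omega\in\underline\sigma}\omega^\star$ one gets $\alpha\widetilde{\otimes}\beta\sqsupseteq\omega^\star$ for every $\omega\in\underline\sigma$, in particular for $\omega=\alpha\widetilde{\otimes}\beta$ itself, giving $\alpha\widetilde{\otimes}\beta\sqsupseteq(\alpha\widetilde{\otimes}\beta)^\star=\alpha^\star\widetilde{\otimes}\bot\sqcap\bot\widetilde{\otimes}\beta^\star$; then the expansion (\ref{developmentetildeordersimplify}) on this two-term infimum forces $\alpha\sqsupseteq\alpha^\star$ or $\beta\sqsupseteq\beta^\star$, contradicting $\neg\,\widehat{\alpha^\star\alpha}{}^{{}^{\overline{ \mathfrak{S}}_{A}}}$ resp.\ $\neg\,\widehat{\beta^\star\beta}{}^{{}^{\overline{ \mathfrak{S}}_{B}}}$ (property (\ref{propstar3})). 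Thus no such common upper bound exists, which completes the proof.
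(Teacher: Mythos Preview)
Your argument for (ii) is correct and is exactly what the paper does --- but it does \emph{not} depend on anything from (i); it follows immediately from $\underline{\sigma_1}\supseteq\underline{\sigma_2}$ and monotonicity of suprema. Your argument for (iii) is also correct and is a clean alternative to the paper's (which carries out a full expansion-formula computation on a general decomposition $\sigma=\bigsqcap_i\sigma_{i,A}\widetilde{\otimes}\sigma_{i,B}$ and on its star): your reduction to a single pure $\alpha\widetilde{\otimes}\beta\in\underline{\sigma}\cap\underline{\sigma^\star}$ and the resulting $\alpha\widetilde{\otimes}\beta\sqsupseteq(\alpha\widetilde{\otimes}\beta)^\star$ works.

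The gap is in (i), precisely where you flag it. The assertion ``$\underline{\sigma^\star}$ consists exactly of the $\omega^\star$ for $\omega\in\underline{\sigma}$'' is false as stated: for pure $\omega$, the definition makes $\omega^\star=\alpha^\star\widetilde{\otimes}\bot\sqcap\bot\widetilde{\otimes}\beta^\star$ an \emph{atom} of $\overline{S}_{AB}$ (this was the content of the preceding atomicity lemma), not a pure state, so $\omega^\star\notin\underline{\sigma^\star}$ (recall $\underline{\,\cdot\,}$ denotes the pure states above). The paper's fix is a reordering: prove order-reversing \emph{first}. With order-reversing in hand, one argues that $((\alpha\widetilde{\otimes}\beta)^\star)^\star$ is pure (since $(\alpha\widetilde{\otimes}\beta)^\star$ is an atom and $\star$ is order-reversing), so the inequality $((\alpha\widetilde{\otimes}\beta)^\star)^\star\sqsubseteq\alpha\widetilde{\otimes}\beta$ --- which you correctly aim at via the computation of $\underline{a}$ --- becomes an equality by maximality of pure states. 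Then for general $\sigma$, order-reversing is used to push $\star$ through the supremum as an infimum:
\[
(\sigma^\star)^\star=\Bigl(\bigsqcup_{\omega\in\underline{\sigma}}\omega^\star\Bigr)^{\!\star}=\bigsqcap_{\omega\in\underline{\sigma}}(\omega^\star)^\star=\bigsqcap_{\omega\in\underline{\sigma}}\omega=\sigma.
\]
The structural point you are missing is that order-reversing is the easy property and should come first; once it is available, both the pure-state case and the passage from pure to general in (i) become tractable, and your attempted direct description of $\underline{\sigma^\star}$ is not needed.
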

\begin{proof}
First of all, the order-reversing property is easy to derive. Indeed, $\sigma\sqsupseteq_{{}_{{{{{ \overline{S}}}}_{AB}}}}\sigma'$ implies $\underline{\, \sigma \,}_{{}_{{{{{ \overline{S}}}}_{AB}}}}\subseteq \underline{\, \sigma' \,}_{{}_{{{{{ \overline{S}}}}_{AB}}}}$, and then implies $\sigma^\star \sqsubseteq_{{}_{{{{{ \overline{S}}}}_{AB}}}}\sigma'{}^\star$.\\
Let us now consider $((\alpha\widetilde{\otimes}\beta)^\star)^\star$ for $\alpha\in \overline{ \mathfrak{S}}{}_{A}^{{}^{pure}},\beta \in \overline{ \mathfrak{S}}{}_{B}^{{}^{pure}}$. Due to the order-reversing property and the fact that $(\alpha\widetilde{\otimes}\beta)^\star$ is an atom, we know that $((\alpha\widetilde{\otimes}\beta)^\star)^\star$ is an element of ${{{ \overline{S}}}}_{AB}^{\;{}^{pure}}$. A direct computation shows that $((\alpha\widetilde{\otimes}\beta)^\star)^\star
\sqsubseteq_{{}_{{{{{ \overline{S}}}}_{AB}}}}
\alpha\widetilde{\otimes}\beta$. Indeed, for any $\omega\in \underline{\alpha^\star}_{{}_{\overline{ \mathfrak{S}}_{A}}}$ and any $\omega'\in \overline{ \mathfrak{S}}{}_{B}^{{}^{pure}}$, we have $\omega^\star \widetilde{\otimes} \bot_{{}_{\overline{ \mathfrak{S}}_{B}}} \sqcap_{{}_{{{{{ \overline{S}}}}_{AB}}}} \bot_{{}_{\overline{ \mathfrak{S}}_{A}}}\widetilde{\otimes} \omega'{}^\star  \sqsubseteq_{{}_{{{{{ \overline{S}}}}_{AB}}}} \alpha \widetilde{\otimes} \beta$. And for any $\omega\in \underline{\beta^\star}_{{}_{\overline{ \mathfrak{S}}_{B}}}$ and any $\omega'\in \overline{ \mathfrak{S}}{}_{A}^{{}^{pure}}$, we have $\omega'{}^\star \widetilde{\otimes} \bot_{{}_{\overline{ \mathfrak{S}}_{B}}} \sqcap_{{}_{{{{{ \overline{S}}}}_{AB}}}} \bot_{{}_{\overline{ \mathfrak{S}}_{A}}}\widetilde{\otimes} \omega{}^\star  \sqsubseteq_{{}_{{{{{ \overline{S}}}}_{AB}}}} \alpha \widetilde{\otimes} \beta$. Using these two results, we obtain $(\alpha^\star\widetilde{\otimes} \bot_{{}_{\overline{ \mathfrak{S}}_{B}}} \sqcap_{{}_{{{{{ \overline{S}}}}_{AB}}}} \bot_{{}_{\overline{ \mathfrak{S}}_{A}}}\widetilde{\otimes} \beta^\star)^\star  \sqsubseteq_{{}_{{{{{ \overline{S}}}}_{AB}}}} \alpha \widetilde{\otimes} \beta$. As a first conclusion, we then obtain $((\alpha\widetilde{\otimes}\beta)^\star)^\star
=\alpha\widetilde{\otimes}\beta$\\
Now, using the order-reversing property, we deduce 
\begin{eqnarray}
(\sigma^\star)^\star := (\bigsqcup{}^{{}^{{{{{ \overline{S}}}}_{AB}}}}_{\omega\;\in \;\underline{\, \sigma \,}_{{}_{{{{{ \overline{S}}}}_{AB}}}}} \omega^\star)^\star=\bigsqcap{}^{{}^{{{{{ \overline{S}}}}_{AB}}}}_{\omega\;\in \;\underline{\, \sigma \,}_{{}_{{{{{ \overline{S}}}}_{AB}}}}} (\omega^\star)^\star=\bigsqcap{}^{{}^{{{{{ \overline{S}}}}_{AB}}}}_{\omega\;\in \;\underline{\, \sigma \,}_{{}_{{{{{ \overline{S}}}}_{AB}}}}} \omega=\sigma
\end{eqnarray}
The property $\forall \sigma\in { \mathfrak{S}}'\smallsetminus \{\bot_{{}_{ \mathfrak{S}'}}\}, \neg \;\widehat{\sigma^{\star}\sigma}{}^{{}^{{ \mathfrak{S}'}}}$ is also required for the map $\star$. This property is easy to check using the formula (\ref{developmentetildeordersimplify}).  Indeed, for any $\{\,(\sigma_{i,A},\sigma_{i,B})\;\vert\;i\in I\}\subseteq \overline{ \mathfrak{S}}_{A}^{{}^{pure}}\times \overline{ \mathfrak{S}}_{B}^{{}^{pure}}$ and any $(\sigma_{A},\sigma_{B})\in \overline{ \mathfrak{S}}{}_{A}^{{}^{pure}}\times \overline{ \mathfrak{S}}{}_{B}^{{}^{pure}}$, we have that
\begin{eqnarray}
\left(\bigsqcap{}^{{}^{{{{ \overline{S}}}}_{AB}}}_{i\in I} \sigma_{i,A}\widetilde{\otimes} \sigma_{i,B}\right)\sqsubseteq_{{}_{{{{ \overline{S}}}}_{AB}}}
\sigma_{A}\widetilde{\otimes}_{{}_{{{{ \overline{S}}}}_{AB}}}\sigma_{B}
\end{eqnarray}
is equivalent to
\begin{eqnarray}
 &&\hspace{-1cm}
\left( (\bigsqcap{}^{{}_{{ \mathfrak{S}}_{A}}}_{k\in I}\; \sigma_{k,A}) \;\sqsubseteq_{{}_{{ \mathfrak{S}}_{A}}}\sigma_{A}
\;\;\textit{\rm and}\;\;
(\bigsqcap{}^{{}_{{ \mathfrak{S}}_{B}}}_{m\in I} \;\sigma_{m,B})\; \sqsubseteq_{{}_{{ \mathfrak{S}}_{B}}} \sigma_{B}
\;\;\textit{\rm and}\;\;\right.\nonumber\\
&&\left.\left(\forall \varnothing \varsubsetneq K  \varsubsetneq I, \;\; (\bigsqcap{}^{{}_{{ \mathfrak{S}}_{A}}}_{k\in K}\; \sigma_{k,A}) \;\sqsubseteq_{{}_{{ \mathfrak{S}}_{A}}}\sigma_{A}\;\;\;\textit{\rm or}\;\;\; (\bigsqcap{}^{{}_{{ \mathfrak{S}}_{B}}}_{m\in I-K} \;\sigma_{m,B})\; \sqsubseteq_{{}_{{ \mathfrak{S}}_{B}}} \sigma_{B}\right) \right).\;\;\;\;\;\;\;\;\;\;\;\;\;\;
\end{eqnarray}
On another part, 
\begin{eqnarray}
\left(\bigsqcap{}^{{}^{{{{ \overline{S}}}}_{AB}}}_{i\in I} \sigma_{i,A}\widetilde{\otimes} \sigma_{i,B}\right)^\star=\left(\bigsqcup{}^{{}^{{{{ \overline{S}}}}_{AB}}}_{i\in I}(\sigma_{i,A}^\star\widetilde{\otimes} \bot_{{}_{\overline{ \mathfrak{S}}_{B}}} \sqcap_{{}_{{{{{ \overline{S}}}}_{AB}}}} \bot_{{}_{\overline{ \mathfrak{S}}_{A}}}\widetilde{\otimes} \sigma_{i,B}^\star)\right) 
\sqsubseteq_{{}_{{{{ \overline{S}}}}_{AB}}}
\sigma_{A}\widetilde{\otimes}_{{}_{{{{ \overline{S}}}}_{AB}}}\sigma_{B}\;\;\;\;\;\;\;\;\;
\end{eqnarray}
is equivalent to
\begin{eqnarray}
 &&\hspace{-1cm}
\left( (\bigsqcup{}^{{}_{{ \mathfrak{S}}_{A}}}_{k\in I}\; \sigma_{k,A}^\star) \;\sqsubseteq_{{}_{{ \mathfrak{S}}_{A}}}\sigma_{A}
\;\;\textit{\rm or}\;\;
(\bigsqcup{}^{{}_{{ \mathfrak{S}}_{B}}}_{m\in I} \;\sigma_{m,B}^\star)\; \sqsubseteq_{{}_{{ \mathfrak{S}}_{B}}} \sigma_{B}
\;\;\textit{\rm or}\;\;\right.\nonumber\\
&&\left.\left(\exists \varnothing \varsubsetneq K  \varsubsetneq I, \;\; (\bigsqcup{}^{{}_{{ \mathfrak{S}}_{A}}}_{k\in K}\; \sigma_{k,A}^\star) \;\sqsubseteq_{{}_{{ \mathfrak{S}}_{A}}}\sigma_{A}\;\;\;\textit{\rm and}\;\;\; (\bigsqcup{}^{{}_{{ \mathfrak{S}}_{B}}}_{m\in I-K} \;\sigma_{m,B}^\star)\; \sqsubseteq_{{}_{{ \mathfrak{S}}_{B}}} \sigma_{B}\right) \right).\;\;\;\;\;\;\;\;\;\;\;\;\;\;
\end{eqnarray}
As a conclusion, we obtain that either there exists $\varnothing \varsubsetneq K  \subseteq I$ such that $(\bigsqcap{}^{{}_{{ \mathfrak{S}}_{A}}}_{k\in K}\; \sigma_{k,A})$ and $(\bigsqcup{}^{{}_{{ \mathfrak{S}}_{A}}}_{k\in K}\; \sigma_{k,A}^\star)$ admit $\sigma_A$ as their common upper bound, or there exists $\varnothing \varsubsetneq K  \subseteq I$ such that $(\bigsqcap{}^{{}_{{ \mathfrak{S}}_{B}}}_{k\in K}\; \sigma_{k,B})$ and $(\bigsqcup{}^{{}_{{ \mathfrak{S}}_{B}}}_{k\in K}\; \sigma_{k,B}^\star)$ admit $\sigma_B$ as their common upper bound. We have then obtained a contradiction.  Hence, $\left(\bigsqcap{}^{{}^{{{{ \overline{S}}}}_{AB}}}_{i\in I} \sigma_{i,A}\widetilde{\otimes} \sigma_{i,B}\right)$ and $\left(\bigsqcap{}^{{}^{{{{ \overline{S}}}}_{AB}}}_{i\in I} \sigma_{i,A}\widetilde{\otimes} \sigma_{i,B}\right)^\star$ do not admit any common upper bound in ${{{ \overline{S}}}}{}_{AB}^{\,{}^{pure}}$.
\end{proof}

We have then completed the check of the second condition necessary to build the "ontic completion" of ${{{ \overline{S}}}}_{AB}$.\\

%\begin{lemma}\label{atomicity}
%${{{ \overline{S}}}}_{AB}$ being equipped with an involutive order-reversing map and being generated by its pure states, we conclude that ${{{ \overline{S}}}}_{AB}$ is atomistic. The set of atoms is given by 
%\begin{eqnarray}
%{ \mathcal{A}}_{{{{ \overline{S}}}}_{AB}} &=& \{\, \sigma^\star\;\vert\;  \sigma\in {{{ \overline{S}}}}{}_{AB}^{{}^{pure}}\,\}
%\end{eqnarray}
%and we have, using de Morgan's law,
%\begin{eqnarray}
%\forall \sigma\in {{{ \overline{S}}}}_{AB},&& \sigma=\bigsqcup{}^{{}^{{{{ \overline{S}}}}_{AB}}} \left((\downarrow_{{}_{{{{ \overline{S}}}}_{AB}}}\sigma)\cap { \mathcal{A}}_{{{{ \overline{S}}}}_{AB}}\right).
%\end{eqnarray}
%\end{lemma}
%\begin{proof}
%Trivial.
%\end{proof}

We take the occasion of this introductory subsection to prove two complementary results.

%\begin{eqnarray}
%\hspace{-0.5cm}\forall \alpha,\beta\in \overline{ \mathfrak{S}}{}^{{}^{pure}},&& \beta\not=\alpha \;\;\Rightarrow \;\; (\alpha\sqcap_{{}_{\overline{ \mathfrak{S}}}}\beta) \sqcoversubset_{{}_{\overline{ \mathfrak{S}}}}\alpha,\beta.\;\;\;\;\;\;\;\;\;\;\;\;\;\;\;\;\;\;\label{coveringpropertySbar}\\
%\hspace{-0.5cm}\forall \lambda,\alpha,\beta,\gamma,\delta\in \overline{ \mathfrak{S}}{}^{{}^{pure}} \;\textit{\rm distinct},&& (\,(\alpha\sqcap_{{}_{\overline{\mathfrak{S}}}}\beta)\not=(\gamma\sqcap_{{}_{\overline{\mathfrak{S}}}}\delta)\;\;\;\textit{\rm and}\;\;\;(\alpha\sqcap_{{}_{\overline{\mathfrak{S}}}}\beta),(\gamma\sqcap_{{}_{\overline{\mathfrak{S}}}}\delta)\sqcoversubset_{{}_{\mathfrak{S}}}\lambda \,) \;\;\Rightarrow \nonumber\\
%&&\hspace{1cm} (\alpha\sqcap_{{}_{\overline{ \mathfrak{S}}}}\beta\sqcap_{{}_{\overline{ \mathfrak{S}}}}\gamma\sqcap_{{}_{\overline{ \mathfrak{S}}}}\delta) \sqcoversubset_{{}_{\overline{ \mathfrak{S}}}}(\alpha\sqcap_{{}_{\overline{\mathfrak{S}}}}\beta),(\gamma\sqcap_{{}_{\overline{\mathfrak{S}}}}\delta).\;\;\;\;\;\;\;\;\;\;\;\;\;\;\;\;\;\;\label{secondcoveringpropertySbar}
%\end{eqnarray}

\begin{theorem}\label{firstcoveringlemma}
Let us suppose that $\overline{ \mathfrak{S}}_A$ and $\overline{ \mathfrak{S}}_B$ both satisfy the following basic property
\begin{eqnarray}
\hspace{-0.5cm}\forall \alpha,\beta\in \overline{ \mathfrak{S}}{}^{{}^{pure}},&& \beta\not=\alpha \;\;\Rightarrow \;\; (\alpha\sqcap_{{}_{\overline{ \mathfrak{S}}}}\beta) \sqcoversubset_{{}_{\overline{ \mathfrak{S}}}}\alpha,\beta.\;\;\;\;\;\;\;\;\;\;\;\;\;\;\;\;\;\;\label{coveringpropertySbarpre}
\end{eqnarray}
Then ${{{ \overline{S}}}}_{AB}$ satisfies the same property. \\
In other words, for any $\sigma,\omega\in {{{ \overline{S}}}}{}_{AB}^{{}^{pure}}$ such that $\omega\not=\sigma$ we have then $(\omega\sqcap_{{}_{{{{ \overline{S}}}}_{AB}}}\sigma)\sqcoversubset_{{}_{{{{ \overline{S}}}}_{AB}}}\sigma,\omega$. 
\end{theorem}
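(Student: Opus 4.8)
The plan is to reduce the covering property in ${{{ \overline{S}}}}_{AB}$ to the covering properties in $\overline{ \mathfrak{S}}_A$ and $\overline{ \mathfrak{S}}_B$ via the explicit order expansion (\ref{developmentetildeordersimplify}). The key structural input is Theorem \ref{theorempuretilde}, which tells us that every pure state of ${{{ \overline{S}}}}_{AB}$ is a pure tensor $\sigma_A\widetilde{\otimes}\sigma_B$ with $\sigma_A\in\overline{ \mathfrak{S}}_A^{{}^{pure}}$ and $\sigma_B\in\overline{ \mathfrak{S}}_B^{{}^{pure}}$. So I fix two distinct pure states $\sigma=\sigma_A\widetilde{\otimes}\sigma_B$ and $\omega=\omega_A\widetilde{\otimes}\omega_B$, and I want to show $(\sigma\sqcap_{{}_{{{{ \overline{S}}}}_{AB}}}\omega)\sqcoversubset_{{}_{{{{ \overline{S}}}}_{AB}}}\sigma,\omega$. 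By symmetry it suffices to prove $(\sigma\sqcap\omega)\sqcoversubset\sigma$, i.e. that there is no $\tau\in{{{ \overline{S}}}}_{AB}$ with $(\sigma\sqcap\omega)\sqsubset\tau\sqsubset\sigma$; equivalently $(\sigma\sqcap\omega)\sqsubset\tau\sqsubseteq\sigma$ forces $\tau=\sigma$.

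First I would handle the trivial subcases. If $\sigma_A=\omega_A$ then $\sigma_B\neq\omega_B$, and using the bimorphism property (\ref{pitensor=tensorpi2preminimal}) one gets $\sigma\sqcap\omega=\sigma_A\widetilde{\otimes}(\sigma_B\sqcap_{{}_{\overline{ \mathfrak{S}}_B}}\omega_B)$, and then the covering $(\sigma_B\sqcap_{{}_{\overline{ \mathfrak{S}}_B}}\omega_B)\sqcoversubset_{{}_{\overline{ \mathfrak{S}}_B}}\sigma_B$ in $\overline{ \mathfrak{S}}_B$ should transport through the expansion (\ref{developmentetildeordersimplify}) to give the covering in ${{{ \overline{S}}}}_{AB}$; symmetrically if $\sigma_B=\omega_B$. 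The substantive case is $\sigma_A\neq\omega_A$ and $\sigma_B\neq\omega_B$. Here I write a hypothetical intermediate element $\tau=\bigsqcap{}^{{}^{{{{ \overline{S}}}}_{AB}}}_{i\in I}\tau_{i,A}\widetilde{\otimes}\tau_{i,B}$ with $(\sigma_A\sqcap_{{}_{\overline{ \mathfrak{S}}_A}}\omega_A)\widetilde{\otimes}(\sigma_B\sqcap_{{}_{\overline{ \mathfrak{S}}_B}}\omega_B)\sqsubseteq\tau\sqsubseteq\sigma_A\widetilde{\otimes}\sigma_B$ — noting that $\sigma\sqcap\omega=(\sigma_A\sqcap_{{}_{\overline{ \mathfrak{S}}_A}}\omega_A)\widetilde{\otimes}(\sigma_B\sqcap_{{}_{\overline{ \mathfrak{S}}_B}}\omega_B)$ by the bimorphism properties. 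Applying (\ref{developmentetildeordersimplify}) to $\tau\sqsubseteq\sigma_A\widetilde{\otimes}\sigma_B$ gives $\bigsqcap_i\tau_{i,A}\sqsubseteq_{{}_{\overline{ \mathfrak{S}}_A}}\sigma_A$, $\bigsqcap_i\tau_{i,B}\sqsubseteq_{{}_{\overline{ \mathfrak{S}}_B}}\sigma_B$, plus the "cut" conditions; applying it to the lower bound gives the reverse-flavored inequalities $\sigma_A\sqcap_{{}_{\overline{ \mathfrak{S}}_A}}\omega_A\sqsubseteq_{{}_{\overline{ \mathfrak{S}}_A}}\tau_{i,A}$ or $\sigma_B\sqcap_{{}_{\overline{ \mathfrak{S}}_B}}\omega_B\sqsubseteq_{{}_{\overline{ \mathfrak{S}}_B}}\tau_{i,B}$ for the relevant cuts.

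The core combinatorial argument is then: using the covering hypothesis (\ref{coveringpropertySbarpre}) in $\overline{ \mathfrak{S}}_A$, each $\tau_{i,A}$ lying between $\sigma_A\sqcap_{{}_{\overline{ \mathfrak{S}}_A}}\omega_A$-ish bounds and $\sigma_A$ is forced to be either $\sigma_A$ or to lie below the covering gap, and similarly in $\overline{ \mathfrak{S}}_B$; I would show that the cut conditions from (\ref{developmentetildeordersimplify}) then collapse the index set so that $\tau$ equals either $\sigma$ or $\sigma\sqcap\omega$. Concretely, partition $I$ into $I_A=\{i:\tau_{i,A}=\sigma_A\}$ and its complement (on the complement $\tau_{i,A}=\sigma_A\sqcap_{{}_{\overline{ \mathfrak{S}}_A}}\omega_A$ by the covering, since it sits below $\sigma_A$ and above $\sigma_A\sqcap_{{}_{\overline{ \mathfrak{S}}_A}}\omega_A$), and likewise for $B$; then a short case analysis on whether $I_A$, $I\smallsetminus I_A$, $I_B$, $I\smallsetminus I_B$ are empty, feeding into the cut conditions, pins down $\tau$. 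I expect the main obstacle to be exactly this bookkeeping: verifying that the universally-quantified cut conditions in (\ref{developmentetildeordersimplify}) leave no room for a $\tau$ strictly between $\sigma\sqcap\omega$ and $\sigma$ — one has to be careful that a "mixed" $\tau$ whose $A$-parts are partly $\sigma_A$ and partly $\sigma_A\sqcap_{{}_{\overline{ \mathfrak{S}}_A}}\omega_A$ (and dually for $B$) actually reduces, via (\ref{pitensor=tensorpi1preminimal}), to one of the two extremes rather than giving a genuine new element. Once that is settled, the symmetric statement $(\sigma\sqcap\omega)\sqcoversubset\omega$ follows by swapping the roles of $\sigma$ and $\omega$, completing the proof.
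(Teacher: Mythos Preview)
Your overall plan matches the paper's one-line proof (``trivial consequence of the expansion formula (\ref{developmentetildeordersimplify})''), and your treatment of the degenerate subcases $\sigma_A=\omega_A$ or $\sigma_B=\omega_B$ is fine. But in the substantive case $\sigma_A\neq\omega_A$, $\sigma_B\neq\omega_B$ there is a genuine error: the identity
\[
\sigma\sqcap_{{}_{\overline{S}_{AB}}}\omega \;=\; (\sigma_A\sqcap_{{}_{\overline{\mathfrak{S}}_A}}\omega_A)\,\widetilde{\otimes}\,(\sigma_B\sqcap_{{}_{\overline{\mathfrak{S}}_B}}\omega_B)
\]
is \emph{false}. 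The bimorphism properties (\ref{pitensor=tensorpi1preminimal})--(\ref{pitensor=tensorpi2preminimal}) expand the right-hand side into the \emph{four}-term meet $\sigma\sqcap(\sigma_A\widetilde{\otimes}\omega_B)\sqcap(\omega_A\widetilde{\otimes}\sigma_B)\sqcap\omega$, which is strictly below the two-term meet $\sigma\sqcap\omega$. For instance $(\sigma_A\sqcap\omega_A)\widetilde{\otimes}(\sigma_B\sqcap\omega_B)\sqsubseteq \sigma_A\widetilde{\otimes}(\sigma_B\sqcap\omega_B)$, whereas $\sigma\sqcap\omega\not\sqsubseteq \sigma_A\widetilde{\otimes}(\sigma_B\sqcap\omega_B)$: the cut condition for $K=\{2\}$ in (\ref{developmentetildeordersimplify}) would require $\omega_A\sqsubseteq\sigma_A$ or $\sigma_B\sqsubseteq\sigma_B\sqcap\omega_B$, and both fail. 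Consequently the ``reverse-flavored inequalities'' you extract for the $\tau_{i,A},\tau_{i,B}$ are not the correct ones, and the partition argument built on them does not stand as written.

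The fix is short. Work with $\sigma\sqcap\omega$ as the genuine two-element meet and, instead of decomposing a general $\tau$, use Theorem~\ref{theoremA4Ptilde}: any $\tau$ with $\sigma\sqcap\omega\sqsubseteq\tau\sqsubseteq\sigma$ is the meet of the pure states above it, and $\sigma$ is among them. If $\tau\neq\sigma$ pick another pure $\rho=\rho_A\widetilde{\otimes}\rho_B\in\underline{\tau}$; then $\sigma\sqcap\omega\sqsubseteq\rho$, and (\ref{developmentetildeordersimplify}) with $I=\{1,2\}$ gives the two cut conditions $(\rho_A=\sigma_A\text{ or }\rho_B=\omega_B)$ and $(\rho_A=\omega_A\text{ or }\rho_B=\sigma_B)$ (equalities because all components are maximal). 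Since $\sigma_A\neq\omega_A$ and $\sigma_B\neq\omega_B$, the only consistent choices are $\rho=\sigma$ or $\rho=\omega$. Hence $\underline{\tau}\subseteq\{\sigma,\omega\}$ and $\tau\in\{\sigma,\sigma\sqcap\omega\}$, which gives the covering. (Note that the factor covering hypothesis (\ref{coveringpropertySbarpre}) is actually only used in your degenerate subcases; in the substantive case maximality of pure states suffices.)
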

\begin{proof}
Trivial consequence of the expansion formula (\ref{developmentetildeordersimplify}).
\end{proof}

\begin{theorem}\label{secondcoveringlemma}
Let us suppose that $\overline{ \mathfrak{S}}_A$ and $\overline{ \mathfrak{S}}_B$ both satisfy the following basic property
\begin{eqnarray}
\hspace{-0.5cm}\forall \lambda,\alpha,\beta,\gamma,\delta\in \overline{ \mathfrak{S}}{}^{{}^{pure}} \;\textit{\rm distinct},&& (\,(\alpha\sqcap_{{}_{\overline{\mathfrak{S}}}}\beta)\not=(\gamma\sqcap_{{}_{\overline{\mathfrak{S}}}}\delta)\;\;\;\textit{\rm and}\;\;\;(\alpha\sqcap_{{}_{\overline{\mathfrak{S}}}}\beta),(\gamma\sqcap_{{}_{\overline{\mathfrak{S}}}}\delta)\sqcoversubset_{{}_{\mathfrak{S}}}\lambda \,) \;\;\Rightarrow \nonumber\\
&&\hspace{1cm} (\alpha\sqcap_{{}_{\overline{ \mathfrak{S}}}}\beta\sqcap_{{}_{\overline{ \mathfrak{S}}}}\gamma\sqcap_{{}_{\overline{ \mathfrak{S}}}}\delta) \sqcoversubset_{{}_{\overline{ \mathfrak{S}}}}(\alpha\sqcap_{{}_{\overline{\mathfrak{S}}}}\beta),(\gamma\sqcap_{{}_{\overline{\mathfrak{S}}}}\delta).\;\;\;\;\;\;\;\;\;\;\;\;\;\;\;\;\;\;\label{secondcoveringpropertySbarpre}
\end{eqnarray}
Then ${{{ \overline{S}}}}_{AB}$ satisfies the same property.  \\
In other words, if we consider $\lambda,\alpha,\beta,\gamma,\delta\in {{{ \overline{S}}}}{}_{AB}^{{}^{pure}}$ distinct, such that $(\,(\alpha\sqcap_{{}_{{{{ \overline{S}}}}_{AB}}}\beta)\not=(\gamma\sqcap_{{}_{{{{ \overline{S}}}}_{AB}}}\delta)$ and $(\alpha\sqcap_{{}_{{{{ \overline{S}}}}_{AB}}}\beta),(\gamma\sqcap_{{}_{{{{ \overline{S}}}}_{AB}}}\delta)\sqcoversubset_{{}_{{{{ \overline{S}}}}_{AB}}}\lambda \,)$, we have then $(\alpha\sqcap_{{}_{{{{ \overline{S}}}}_{AB}}}\beta\sqcap_{{}_{{{{ \overline{S}}}}_{AB}}}\gamma\sqcap_{{}_{{{{ \overline{S}}}}_{AB}}}\delta) \sqcoversubset_{{}_{{{{ \overline{S}}}}_{AB}}}(\alpha\sqcap_{{}_{{{{ \overline{S}}}}_{AB}}}\beta),(\gamma\sqcap_{{}_{{{{ \overline{S}}}}_{AB}}}\delta)$.
\end{theorem}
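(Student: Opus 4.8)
The plan is to reduce the statement for $\overline{S}_{AB}$ to the corresponding statement for $\overline{\mathfrak{S}}_A$ and $\overline{\mathfrak{S}}_B$ by using the expansion formula (\ref{developmentetildeordersimplify}) together with the explicit description of the pure states (Theorem \ref{theorempuretilde}), exactly as in the proof of the previous covering lemma (Theorem \ref{firstcoveringlemma}), but now being careful because the five states involved are meets of pure tensors rather than pure tensors themselves. First I would write $\alpha=\alpha_A\widetilde{\otimes}\alpha_B$, $\beta=\beta_A\widetilde{\otimes}\beta_B$, $\gamma=\gamma_A\widetilde{\otimes}\gamma_B$, $\delta=\delta_A\widetilde{\otimes}\delta_B$ with all components pure in $\overline{\mathfrak{S}}_A$ resp. $\overline{\mathfrak{S}}_B$ (legitimate by Theorem \ref{theorempuretilde}), and $\lambda=\bigsqcap^{{}^{\overline{S}_{AB}}}_{i\in I}\lambda_{i,A}\widetilde{\otimes}\lambda_{i,B}$. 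Then I would translate the hypotheses $(\alpha\sqcap\beta),(\gamma\sqcap\delta)\sqcoversubset_{{}_{\overline{S}_{AB}}}\lambda$ into conditions on the $A$- and $B$-components via (\ref{developmentetildeordersimplify}): the partial traces (Theorem \ref{theorempartialtracesminimal}) give $\alpha_A\sqcap_{{}_{\overline{\mathfrak{S}}_A}}\beta_A,\gamma_A\sqcap_{{}_{\overline{\mathfrak{S}}_A}}\delta_A\sqsubseteq$ (meet of the $\lambda_{i,A}$) and similarly on the $B$ side, and the covering conditions force a rigidity on which components can differ.

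The key observation is that a covering relation $(\alpha\sqcap_{{}_{\overline{S}_{AB}}}\beta)\sqcoversubset_{{}_{\overline{S}_{AB}}}\lambda$ with $\lambda$ pure, by the expansion formula, forces $\lambda$ itself to be a pure tensor $\lambda_A\widetilde{\otimes}\lambda_B$ and moreover forces the pair $\{\alpha,\beta\}$ to agree in exactly one of the two tensor factors while covering in the other (this is precisely the content exploited in Theorem \ref{firstcoveringlemma}). So from $(\alpha\sqcap\beta)\sqcoversubset\lambda$ we get, up to swapping $A\leftrightarrow B$, either $\alpha_A=\beta_A=\lambda_A$ with $\alpha_B\sqcap_{{}_{\overline{\mathfrak{S}}_B}}\beta_B\sqcoversubset_{{}_{\overline{\mathfrak{S}}_B}}\lambda_B$, or the symmetric situation. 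The same for $(\gamma\sqcap\delta)\sqcoversubset\lambda$. Now I would case-split on whether the two pairs agree in the same factor or in opposite factors. If both pairs are "rigid in $A$" (so $\alpha_A=\beta_A=\gamma_A=\delta_A=\lambda_A$), then the four-fold meet $\alpha\sqcap\beta\sqcap\gamma\sqcap\delta$ is $\lambda_A\widetilde{\otimes}(\alpha_B\sqcap\beta_B\sqcap\gamma_B\sqcap\delta_B)$, and the desired double covering reduces directly to the hypothesis (\ref{secondcoveringpropertySbarpre}) applied in $\overline{\mathfrak{S}}_B$ to $\lambda_B,\alpha_B,\beta_B,\gamma_B,\delta_B$ (using (\ref{pitensor=tensorpi2preminimal}) to pull the meet through the tensor, and Theorem \ref{firstcoveringlemma}'s mechanism to turn a covering in $\overline{\mathfrak{S}}_B$ into a covering in $\overline{S}_{AB}$). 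The symmetric subcase (both rigid in $B$) is identical.

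The main obstacle will be the mixed case, where one pair is rigid in the $A$-factor and the other in the $B$-factor: say $\alpha_A=\beta_A=\lambda_A$, $\alpha_B\sqcap\beta_B\sqcoversubset\lambda_B$, while $\gamma_B=\delta_B=\lambda_B$, $\gamma_A\sqcap\delta_A\sqcoversubset\lambda_A$. Here I would compute the four-fold meet explicitly using bimorphicity: $\alpha\sqcap\beta\sqcap\gamma\sqcap\delta=(\lambda_A\sqcap(\gamma_A\sqcap\delta_A))\widetilde{\otimes}\dots$, which expands into a meet of several pure tensors whose $A$- and $B$-components run over $\{\lambda_A,\gamma_A\sqcap\delta_A\}$ and $\{\lambda_B,\alpha_B\sqcap\beta_B\}$; since $\gamma_A\sqcap\delta_A\sqcoversubset\lambda_A$ and $\alpha_B\sqcap\beta_B\sqcoversubset\lambda_B$ each drop rank by exactly one, a direct application of the expansion formula (\ref{developmentetildeordersimplify}) shows this four-fold meet is covered by $\alpha\sqcap\beta=\lambda_A\widetilde{\otimes}(\alpha_B\sqcap\beta_B)$ and also by $\gamma\sqcap\delta=(\gamma_A\sqcap\delta_A)\widetilde{\otimes}\lambda_B$. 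The bookkeeping — verifying that exactly the right subsets $K$ appear and that no intermediate state sneaks in — is the part that needs care; I expect it to be a finite but slightly tedious check on the inequalities (\ref{developmentetildeordersimplify}), and it is the only place where the hypothesis on $\overline{\mathfrak{S}}$ is genuinely needed (in the pure mixed case the conclusion actually follows from (\ref{coveringpropertySbarpre}) alone applied in each factor). I would close by remarking, as in Theorem \ref{firstcoveringlemma}, that the whole argument is a "trivial consequence of the expansion formula (\ref{developmentetildeordersimplify})" once the case analysis is set up.
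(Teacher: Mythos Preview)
Your proposal is correct and follows essentially the same approach as the paper: write each of the five pure states as a pure tensor (via Theorem~\ref{theorempuretilde}), use the expansion formula (\ref{developmentetildeordersimplify}) to see that each covering $(\alpha\sqcap\beta)\sqcoversubset\lambda$ forces the pair to agree in one tensor factor and cover in the other, then case-split into ``same factor'' (where the hypothesis on $\overline{\mathfrak{S}}_A$ or $\overline{\mathfrak{S}}_B$ is invoked) versus ``mixed factor'' (where only the first covering property is needed). One small simplification: since $\lambda\in\overline{S}_{AB}^{\,pure}$ by hypothesis, Theorem~\ref{theorempuretilde} already gives $\lambda=\lambda_A\widetilde{\otimes}\lambda_B$ directly, so there is no need to start with a general meet $\bigsqcap_{i\in I}\lambda_{i,A}\widetilde{\otimes}\lambda_{i,B}$ and then argue it collapses.
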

\begin{proof}
Direct consequence of the expansion formula (\ref{developmentetildeordersimplify}). Indeed, $\lambda,\alpha,\beta,\gamma,\delta\in {{{ \overline{S}}}}{}_{AB}^{{}^{pure}}$ all distinct with $(\,(\alpha\sqcap_{{}_{{{{ \overline{S}}}}_{AB}}}\beta)\not=(\gamma\sqcap_{{}_{{{{ \overline{S}}}}_{AB}}}\delta)$ and $(\alpha\sqcap_{{}_{{{{ \overline{S}}}}_{AB}}}\beta),(\gamma\sqcap_{{}_{{{{ \overline{S}}}}_{AB}}}\delta)\sqcoversubset_{{}_{{{{{{{ \overline{S}}}}_{AB}}}}}}\lambda \,)$, imply that we are in one of the following cases (we denote $\lambda:=\sigma_\lambda\widetilde{\otimes}\kappa_\lambda$, $\alpha:=\sigma_\alpha\widetilde{\otimes}\kappa_\alpha$, $\beta:=\sigma_\beta\widetilde{\otimes}\kappa_\beta$,$\gamma:=\sigma_\gamma\widetilde{\otimes}\kappa_\gamma$, $\delta:=\sigma_\delta\widetilde{\otimes}\kappa_\delta$ with $\sigma_\lambda,\sigma_\alpha,\sigma_\beta,\sigma_\gamma,\sigma_\delta\in {{{ \overline{\mathfrak{S}}}}}{}_{A}^{{}^{pure}}$, $\kappa_\lambda,\sigma_\alpha,\sigma_\beta,\sigma_\gamma,\sigma_\delta\in {{{ \overline{\mathfrak{S}}}}}{}_{B}^{{}^{pure}}$): \\
(1)  $\kappa_\alpha=\kappa_\beta=\kappa_\lambda$ and $\kappa_\gamma=\kappa_\delta=\kappa_\lambda$ and $(\sigma_\alpha\sqcap_{{}_{\overline{\mathfrak{S}_A}}}\sigma_\beta),(\sigma_\gamma\sqcap_{{}_{\overline{\mathfrak{S}}_A}}\sigma_\delta)\sqcoversubset_{{}_{\overline{\mathfrak{S}_A}}}\sigma_\lambda$ and $(\sigma_\alpha\sqcap_{{}_{\overline{\mathfrak{S}}_A}}\sigma_\beta)\not=(\sigma_\gamma\sqcap_{{}_{\overline{\mathfrak{S}}_A}}\sigma_\delta)$.\\
(2) $\sigma_\alpha=\sigma_\beta=\sigma_\lambda$ and $\sigma_\gamma=\sigma_\delta=\sigma_\lambda$ and $(\kappa_\alpha\sqcap_{{}_{\overline{\mathfrak{S}_B}}}\kappa_\beta),(\kappa_\gamma\sqcap_{{}_{\overline{\mathfrak{S}}_B}}\kappa_\delta)\sqcoversubset_{{}_{\overline{\mathfrak{S}}_B}}\kappa_\lambda$ and $(\kappa_\alpha\sqcap_{{}_{\overline{\mathfrak{S}}_B}}\kappa_\beta)\not=(\kappa_\gamma\sqcap_{{}_{\overline{\mathfrak{S}}_B}}\kappa_\delta)$.\\
(3) $\sigma_\alpha=\sigma_\beta=\sigma_\lambda$ and $\kappa_\gamma=\kappa_\delta=\kappa_\lambda$ and $(\kappa_\alpha\sqcap_{{}_{\overline{\mathfrak{S}_B}}}\kappa_\beta)\sqcoversubset_{{}_{\overline{\mathfrak{S}}_B}}\kappa_\lambda$ and $(\sigma_\gamma\sqcap_{{}_{\overline{\mathfrak{S}_A}}}\sigma_\delta)\sqcoversubset_{{}_{\overline{\mathfrak{S}}_A}}\sigma_\lambda$.\\
(4) $\kappa_\alpha=\kappa_\beta=\kappa_\lambda$ and $\sigma_\gamma=\sigma_\delta=\sigma_\lambda$ and $(\sigma_\alpha\sqcap_{{}_{\overline{\mathfrak{S}_A}}}\sigma_\beta)\sqcoversubset_{{}_{\overline{\mathfrak{S}}_A}}\sigma_\lambda$ and $(\kappa_\gamma\sqcap_{{}_{\overline{\mathfrak{S}_B}}}\kappa_\delta)\sqcoversubset_{{}_{\overline{\mathfrak{S}}_B}}\kappa_\lambda$.\\
In the case (1) we then apply the property (\ref{secondcoveringpropertySbarpre}) satisfied by ${{{ \overline{\mathfrak{S}}}}}{}_{A}$ to conclude\\
In the case (2) we then apply the property (\ref{secondcoveringpropertySbarpre}) satisfied by ${{{ \overline{\mathfrak{S}}}}}{}_{B}$ to conclude.\\
In the case (3) we just observe that $(\alpha\sqcap_{{}_{{{{ \overline{S}}}}_{AB}}}\beta\sqcap_{{}_{{{{ \overline{S}}}}_{AB}}}\gamma\sqcap_{{}_{{{{ \overline{S}}}}_{AB}}}\delta)=(\sigma_\gamma\sqcap_{{}_{\overline{\mathfrak{S}_A}}}\sigma_\delta)\widetilde{\otimes}\kappa_\lambda \sqcap_{{}_{\overline{S}_{AB}}} \sigma_\lambda \widetilde{\otimes}(\kappa_\alpha\sqcap_{{}_{\overline{\mathfrak{S}_B}}}\kappa_\beta)$ and we recall that 
$(\kappa_\alpha\sqcap_{{}_{\overline{\mathfrak{S}}_A}}\kappa_\beta)\sqcoversubset_{{}_{\overline{\mathfrak{S}}_B}}\kappa_\lambda$ and $(\sigma_\gamma\sqcap_{{}_{\overline{\mathfrak{S}}_A}}\sigma_\delta)\sqcoversubset_{{}_{\overline{\mathfrak{S}}_A}}\sigma_\lambda$ to conclude.\\
The case (4) is similar to the case (3).
\end{proof}
  
\subsection{Bipartite deterministic experiments}\label{subsectionbipartiteclassical}

During this subsection, we intent to show that the minimal tensor product is adequate for the description of bipartite deterministic  experiments.
Here, ${ \mathfrak{S}}_{A}$ and ${ \mathfrak{S}}_{B}$ will be two deterministic spaces of states (i.e. they are simplex spaces of states). They are naturally equipped with their real structures described in (\ref{starsimplex}).

\begin{theorem}
The minimal tensor product is adequate for the description of bipartite deterministic experiments.
\end{theorem}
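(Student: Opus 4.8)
By ``adequate for the description of bipartite deterministic experiments'' we mean that the minimal tensor product ${{\overline{S}}}_{AB}=\overline{{ \mathfrak{S}}_A}\widetilde{\otimes}\overline{{ \mathfrak{S}}_B}$, chosen as the candidate $\boxtimes$ for the compound state space ${ \mathfrak{S}}_{AB}$, satisfies all the building principles of subsection~\ref{subsectioncompoundfirstremarks}: the existence of mixtures and of a completely mixed state (\ref{tensorinfimum})--(\ref{tensorbot}), the existence of a real structure, the inclusion of pure tensors (\ref{inclusionpuretensors}), tomographic locality (\ref{tensorseparated}), the two bimorphic identities (\ref{pitensor=tensorpi1pre})--(\ref{pitensor=tensorpi2pre}), and the existence of the two partial traces (\ref{requirementpartialtraces}). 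The plan is to check these requirements one after the other; each of them turns out to be an immediate consequence of a result already established for the minimal tensor product, once one uses the fact that a simplex space of states coincides with its own set of real states. First I would record this simplification: since ${ \mathfrak{S}}_A$ and ${ \mathfrak{S}}_B$ are simplex spaces of states, the converse part of the Theorem relating simplices to the canonical star map (\ref{starsimplex}) gives ${ \mathfrak{S}}_A=\overline{{ \mathfrak{S}}_A}$ and ${ \mathfrak{S}}_B=\overline{{ \mathfrak{S}}_B}$, so ${{\overline{S}}}_{AB}$ is really the minimal tensor product of the full state spaces. Down-completeness of ${{\overline{S}}}_{AB}$ as an Inf semi-lattice, hence (\ref{tensorinfimum}), is Theorem~\ref{theoremaxiomA1bipartitepre}, and the existence of the bottom element $\bot_{{}_{{\mathfrak{S}}_A}}\widetilde{\otimes}\bot_{{}_{{\mathfrak{S}}_B}}$, hence (\ref{tensorbot}), is Theorem~\ref{bottomcompound}.

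For the real structure, the key input is Theorem~\ref{SASBditribandcup}, which tells us that ${{\overline{S}}}_{AB}$ is again a simplex space of states (and in fact coincides with the canonical semilattice tensor product ${ \mathfrak{S}}_A\otimes{ \mathfrak{S}}_B$). By Lemma~\ref{lemmaUsimplex} a simplex is generated by its maximal elements, and by the first part of the Theorem attached to (\ref{starsimplex}), equipping ${{\overline{S}}}_{AB}$ with the canonical star defined there yields a genuine real structure $(\overline{{{\overline{S}}}_{AB}},\star)$. Moreover the converse part of that same Theorem (equivalently, Theorem~\ref{completionsimplex}) gives $\overline{{{\overline{S}}}_{AB}}={{\overline{S}}}_{AB}$, so the compound deterministic system contains no hidden states --- exactly what one expects for a classical composite, and which in particular makes the ``ontic completion'' of subsection~\ref{subsectioncompleterealspaceofstates} trivial here.

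The remaining requirements are then direct citations. The inclusion of pure tensors (\ref{inclusionpuretensors}) is the map $\iota^{{{\overline{S}}}_{AB}}$ of Lemma~\ref{lemmainclusiontensor}, and the uniqueness of the associated bipartite real state follows from the separation property of the construction. Tomographic locality (\ref{tensorseparated}) is precisely Lemma~\ref{lemmaseparatedminimal}; it applies because, ${{\overline{S}}}_{AB}$ being a simplex with no hidden states, every bipartite real state is of the form $\bigsqcap^{{{\overline{S}}}_{AB}}_{i\in I}\sigma_{i,A}\widetilde{\otimes}\sigma_{i,B}$, so no state is missed by the quantifiers of (\ref{tensorseparated}). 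The bimorphic identities (\ref{pitensor=tensorpi1pre})--(\ref{pitensor=tensorpi2pre}) are (\ref{pitensor=tensorpi1preminimal})--(\ref{pitensor=tensorpi2preminimal}). Finally, the partial traces required in (\ref{requirementpartialtraces}) are the two homomorphisms $\zeta^{\overline{{ \mathfrak{S}}_A}\overline{{ \mathfrak{S}}_B}}_{(1)}$ and $\zeta^{\overline{{ \mathfrak{S}}_A}\overline{{ \mathfrak{S}}_B}}_{(2)}$ of Theorem~\ref{theorempartialtracesminimal}.

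The only genuinely substantive ingredient is Theorem~\ref{SASBditribandcup} --- the fact that the minimal tensor product of two simplices is still a simplex, equivalently that it agrees with the canonical tensor product ${ \mathfrak{S}}_A\otimes{ \mathfrak{S}}_B$; everything else in the present statement is bookkeeping against the list of subsection~\ref{subsectioncompoundfirstremarks}. The one point that deserves a moment of care is reconciling the quantifier form of (\ref{tensorseparated}), which ranges over arbitrary infima of pure tensors in the compound real-state space, with the statement of Lemma~\ref{lemmaseparatedminimal}; this is legitimate exactly because the generation of ${{\overline{S}}}_{AB}$ by its pure tensors (Theorem~\ref{SASBditribandcup} together with Lemma~\ref{lemmaUsimplex}) and the absence of hidden states ensure that every bipartite real state is accounted for.
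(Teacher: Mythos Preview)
Your proof is correct and follows essentially the same route as the paper's own argument: both reduce to the identification ${ \mathfrak{S}}_A=\overline{{ \mathfrak{S}}_A}$, ${ \mathfrak{S}}_B=\overline{{ \mathfrak{S}}_B}$ via (\ref{starsimplex}), then tick off the requirements of subsection~\ref{subsectioncompoundfirstremarks} using Theorems~\ref{theoremaxiomA1bipartitepre}, \ref{bottomcompound}, \ref{SASBditribandcup}, \ref{theorempartialtracesminimal}, Lemma~\ref{lemmainclusiontensor}, and (\ref{pitensor=tensorpi1preminimal})--(\ref{pitensor=tensorpi2preminimal}). You are slightly more thorough in that you explicitly invoke Lemma~\ref{lemmaseparatedminimal} for tomographic locality (\ref{tensorseparated}), which the paper's proof does not mention but which is indeed part of the list of building principles.
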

\begin{proof}
By assumption, ${ \mathfrak{S}}_{A}$ and ${ \mathfrak{S}}_{B}$ are deterministic spaces of states and then ${ \mathfrak{S}}_{A}$ and ${ \mathfrak{S}}_{B}$ admit trivial real structures (see  property (\ref{starsimplex}) ) with ${ \mathfrak{S}}_{A}=\overline{{ \mathfrak{S}}_{A}}$ and ${ \mathfrak{S}}_{B}=\overline{{ \mathfrak{S}}_{B}}$.  We then consider ${ \mathfrak{S}}_{A}\boxtimes { \mathfrak{S}}_{B}=\overline{{ \mathfrak{S}}_{A}}\widetilde{\otimes}\overline{{ \mathfrak{S}}_{B}}=\overline{{ \mathfrak{S}}_{A}}{\otimes}\overline{{ \mathfrak{S}}_{B}}$.\\
First of all, we observe that the requirements (\ref{tensorinfimum}) and (\ref{tensorbot}) are direct consequences of Theorem \ref{theoremaxiomA1bipartitepre} and Theorem \ref{bottomcompound}. \\
Then, because of the Theorem \ref{SASBditribandcup}, we know that ${ \mathfrak{S}}_{A} {\boxtimes} { \mathfrak{S}}_{B}$ is also a deterministic space of states, and then ${ \mathfrak{S}}_{A} {\boxtimes} { \mathfrak{S}}_{B}$ admits a real structure according to (\ref{starsimplex}). This real structure $(\overline{{ \mathfrak{S}}_{A} \boxtimes { \mathfrak{S}}_{B}},\star)$ is simply given by $\overline{{ \mathfrak{S}}_{A} \boxtimes { \mathfrak{S}}_{B}}=\overline{{ \mathfrak{S}}_{A}} \widetilde{\otimes} \overline{{ \mathfrak{S}}_{B}}$ and the star $\star$ is the involution defined in (\ref{starsimplex}).\\
Secondly, the requirement (\ref{inclusionpuretensors}) is guarantied by Lemma \ref{lemmainclusiontensor} and the basic properties ${ \mathfrak{S}}_{A}=\overline{{ \mathfrak{S}}_{A}}$ and ${ \mathfrak{S}}_{B}=\overline{{ \mathfrak{S}}_{B}}$.\\
Thirdly, the requirements (\ref{pitensor=tensorpi1pre}) and (\ref{pitensor=tensorpi2pre}) are guarantied by the properties (\ref{pitensor=tensorpi1preminimal}) and (\ref{pitensor=tensorpi2preminimal}) using ${ \mathfrak{S}}_{A}=\overline{{ \mathfrak{S}}_{A}}$ and ${ \mathfrak{S}}_{B}=\overline{{ \mathfrak{S}}_{B}}$.\\
Endly, the requirement (\ref{requirementpartialtraces}) is guarantied by Theorem \ref{theorempartialtracesminimal}. 
\end{proof}

\subsection{Morphisms for the bipartite deterministic experiments}\label{subsectionsymmetriesbipartitedeterministic}

Let us consider ${ \mathfrak{S}}_{A_1}$, ${ \mathfrak{S}}_{A_2}$,  ${ \mathfrak{S}}_{B_1}$ and ${ \mathfrak{S}}_{B_2}$ four simplex spaces of states. Let us consider a morphism $f$ (resp. $g$) from the states space ${ \mathfrak{S}}_{A_1}$ (resp.  ${ \mathfrak{S}}_{B_1}$) to the states space ${ \mathfrak{S}}_{A_2}$ (resp. ${ \mathfrak{S}}_{B_2}$). \\
We can define the morphism $(f\widetilde{\otimes} g)$ from the simplex states space ${{S} }_{A_1B_1}={ \mathfrak{S}}_{A_1}\widetilde{\otimes}{ \mathfrak{S}}_{B_1}$ to the simplex states space ${{S}}_{A_2B_2}={ \mathfrak{S}}_{A_2}\widetilde{\otimes}{ \mathfrak{S}}_{B_2}$ by
\begin{eqnarray}
(f\widetilde{\otimes} g)(\bigsqcap{}^{{}^{{ {S} }_{A_1B_1}}}_{i\in I} \sigma_{i,A_1}\widetilde{\otimes} \sigma_{i,B_1}) & := & \bigsqcap{}^{{}^{{{S} }_{A_2B_2}}}_{i\in I} f(\sigma_{i,A_1})\widetilde{\otimes} g(\sigma_{i,B_1}).
\end{eqnarray}

\subsection{The fundamental description of bipartite indeterministic experiments}\label{subsectiontensorcompleterealspace}

During this subsection, we consider two ontic completions $(\!( \overline{ \mathfrak{S}}_{A},\star)\!)_c$ and $(\!( \overline{ \mathfrak{S}}_{B},\star)\!)_c$. As usual we denote
${ \mathfrak{S}}_A={ \mathfrak{J}}^c_{{\overline{ \mathfrak{S}}_A}}$ and ${ \mathfrak{S}}_B={ \mathfrak{J}}^c_{{\overline{ \mathfrak{S}}_B}}$.\\
We now intent to expose a first natural proposal for the description of the bipartite experiments associated to completely indeterministic compound systems (we will consider these compound systems as being built from two systems which separate descriptions are ensured by the ontic completions $(\!( \overline{ \mathfrak{S}}_{A},\star)\!)_c$ and $(\!( \overline{ \mathfrak{S}}_{B},\star)\!)_c$).

\begin{definition}\label{definitiongenerictensorproduct}
We define the tensor product of ${ \mathfrak{S}}_A={ \mathfrak{J}}^c_{{\overline{ \mathfrak{S}}_A}}$ and ${ \mathfrak{S}}_B={ \mathfrak{J}}^c_{{\overline{ \mathfrak{S}}_B}}$ denoted ${ \mathfrak{S}}_A\widehat{\otimes}{ \mathfrak{S}}_B$ by
\begin{eqnarray}
{ \mathfrak{S}}_A\;\widehat{\otimes}\; { \mathfrak{S}}_B &:=& { \mathfrak{J}}^c_{{\overline{ \mathfrak{S}}_A}\widetilde{\otimes}{\overline{ \mathfrak{S}}_B}}
\end{eqnarray}
We note that the ontic completion of $(\overline{ \mathfrak{S}}_{A}\widetilde{\otimes} \overline{ \mathfrak{S}}_{B},\star)$ can be defined because the basic conditions are fulfilled.
\end{definition}

Our aim is now to check all the basic requirements addressed in subsection \ref{subsectioncompoundfirstremarks}.\\

\begin{lemma}\label{generictensordowncomplete}
${{{ \mathfrak{S}}_A}\widehat{\otimes} {{ \mathfrak{S}}_{B}}}$ is a down complete Inf semi-lattice. 
The lowest bound is given by $\bot_{{}_{{{{ \mathfrak{S}}_A}\widehat{\otimes} {{ \mathfrak{S}}_{B}}}}}=\bot_{{}_{{{\overline{ \mathfrak{S}}_A}\widetilde{\otimes} {\overline{ \mathfrak{S}}_{B}}}}}$.
\end{lemma}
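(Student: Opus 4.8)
The statement asserts that ${ \mathfrak{S}}_A\widehat{\otimes}{ \mathfrak{S}}_B := { \mathfrak{J}}^c_{{\overline{ \mathfrak{S}}_A}\widetilde{\otimes}{\overline{ \mathfrak{S}}_B}}$ is a down complete Inf semi-lattice with bottom element $\bot_{{}_{{{\overline{ \mathfrak{S}}_A}\widetilde{\otimes} {\overline{ \mathfrak{S}}_{B}}}}}$. The plan is to recognize this as a direct instance of the general reconstruction machinery from Section~\ref{sectioncharacterizationhiddenstates}, applied to the particular real space of states $\overline{ \mathfrak{S}}_A\widetilde{\otimes}\overline{ \mathfrak{S}}_B$. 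First I would invoke Definition~\ref{definitiongenerictensorproduct} itself, which already notes that the ontic completion of $(\overline{ \mathfrak{S}}_A\widetilde{\otimes}\overline{ \mathfrak{S}}_B,\star)$ is well defined: the prerequisites, namely that $\overline{ \mathfrak{S}}_A\widetilde{\otimes}\overline{ \mathfrak{S}}_B={{{ \overline{S}}}}_{AB}$ is a down complete Inf semi-lattice with bottom element, is generated by its maximal elements, and carries a suitable star map, were verified in subsection~\ref{subsectiontensorpreliminary} (Theorems~\ref{bottomcompound}, \ref{theoremA4Ptilde}, and the star-map theorem there), plus the Finite Rank Condition carries over. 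Hence $(\!({{{ \overline{S}}}}_{AB},\star)\!)_c$ is a bona fide real space of states in the sense of Definition~\ref{definitionrealspaceofstates}.

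Next I would apply Theorem~\ref{theoremcompletion} to the real space of states $({{{ \overline{S}}}}_{AB},{ \mathcal{Q}}_c({{{ \overline{S}}}}_{AB}),cl_c^{{{{ \overline{S}}}}_{AB}},\star)$. That theorem states precisely that ${ \mathfrak{J}}^c_{{{{ \overline{S}}}}_{AB}} = { \mathfrak{J}}_{{{{ \overline{S}}}}_{AB}}$ is a well defined \emph{space of states}, i.e.\ a poset satisfying Definition~\ref{definitiongeneralizedspaceofstates}: a down complete Inf semi-lattice with a bottom element. The Inf semi-lattice structure is the one given in~(\ref{RECdefJsqbarJ}), and down-completeness of ${ \mathfrak{J}}_{{ \mathfrak{S}}'}$ was established in the discussion following Definition~\ref{definitionrealspaceofstates} (it is used there to define arbitrary infima and, for bounded families, suprema via~(\ref{supremaJ})). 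So down-completeness of ${ \mathfrak{S}}_A\widehat{\otimes}{ \mathfrak{S}}_B$ is immediate from that general result.

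For the identification of the bottom element, I would recall that in the general construction the bottom of ${ \mathfrak{J}}_{{ \mathfrak{S}}'}$ is $\bot_{{}_{{ \mathfrak{J}}_{{ \mathfrak{S}}'}}}=\bot_{{}_{{ \mathfrak{S}}'}}$ (stated explicitly right after Definition~\ref{definitionrealspaceofstates}, via the injection $\sigma\mapsto\{\sigma\}$). Specializing ${ \mathfrak{S}}'={{{ \overline{S}}}}_{AB}=\overline{ \mathfrak{S}}_A\widetilde{\otimes}\overline{ \mathfrak{S}}_B$ gives $\bot_{{}_{{ \mathfrak{S}}_A\widehat{\otimes}{ \mathfrak{S}}_B}}=\bot_{{}_{\overline{ \mathfrak{S}}_A\widetilde{\otimes}\overline{ \mathfrak{S}}_B}}$, and by Theorem~\ref{bottomcompound} this in turn equals $\bot_{{}_{\overline{ \mathfrak{S}}_A}}\widetilde{\otimes}\bot_{{}_{\overline{ \mathfrak{S}}_B}}$, consistent with the displayed claim.

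\textbf{Main obstacle.} There is no deep difficulty here: the lemma is a bookkeeping corollary. The one point that needs care — and which I would spell out rather than wave through — is verifying that ${{{ \overline{S}}}}_{AB}$ genuinely satisfies \emph{all} the hypotheses required to form $(\!(\cdot,\star)\!)_c$, in particular the Finite Rank Condition (Definition~\ref{FiniteRankCondition}) used in subsection~\ref{subsectioncompleterealspaceofstates} to prove idempotency of $cl_c$. If $\overline{ \mathfrak{S}}_A$ and $\overline{ \mathfrak{S}}_B$ satisfy the Finite Rank Condition, one should check (using the expansion formula~(\ref{developmentetildeordersimplify}) and the characterization of pure states in Theorem~\ref{theorempuretilde}) that $\overline{ \mathfrak{S}}_A\widetilde{\otimes}\overline{ \mathfrak{S}}_B$ does too; granting this, everything else is a direct appeal to Theorems~\ref{theoremcompletion} and~\ref{bottomcompound}, and the proof reduces to the single line ``Direct consequence of Theorem~\ref{theoremcompletion} applied to the real space of states $(\!({{{ \overline{S}}}}_{AB},\star)\!)_c$, together with Theorem~\ref{bottomcompound}.''
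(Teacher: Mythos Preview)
Your proposal is correct and follows exactly the paper's approach: the paper's proof is the single line ``Direct consequence of the construction of the ontic completion $(\!( \overline{ \mathfrak{S}}_{A}\widetilde{\otimes} \overline{ \mathfrak{S}}_{B},\star)\!)_c$,'' which is precisely what your argument reduces to. Your more detailed unpacking (invoking Theorem~\ref{theoremcompletion}, the identification $\bot_{{}_{{ \mathfrak{J}}_{{ \mathfrak{S}}'}}}=\bot_{{}_{{ \mathfrak{S}}'}}$, and the Finite Rank Condition caveat) is a faithful elaboration of that one line rather than a different route.
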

\begin{proof}
Direct consequence of the construction of the ontic completion $(\!( \overline{ \mathfrak{S}}_{A}\widetilde{\otimes} \overline{ \mathfrak{S}}_{B},\star)\!)_c$.
\end{proof}

\begin{lemma}\label{generictensorstarstructure}
${{{ \mathfrak{S}}_A}\widehat{\otimes} {{ \mathfrak{S}}_{B}}}$ admits a real structure given by $({\overline{ \mathfrak{S}}_A}\widetilde{\otimes} {\overline{ \mathfrak{S}}_B},\star)$. %We even note that ${{{ \mathfrak{S}}_A}\widehat{\otimes} {{ \mathfrak{S}}_{B}}}$ is complete with respect to its real structure.
\end{lemma}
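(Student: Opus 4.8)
The plan is to verify that the pair $({\overline{ \mathfrak{S}}_A}\widetilde{\otimes} {\overline{ \mathfrak{S}}_B},\star)$ satisfies all the defining conditions of a real structure as listed in Definition \ref{definitionstarstructure}, using the machinery already assembled for ${{{ \overline{S}}}}_{AB}=\overline{ \mathfrak{S}}_A\widetilde{\otimes}\overline{ \mathfrak{S}}_B$ in subsection \ref{subsectiontensorpreliminary}, together with the general reconstruction result Theorem \ref{theoremcompletion}. The key observation is that ${{{ \mathfrak{S}}_A}\widehat{\otimes} {{ \mathfrak{S}}_{B}}}:={ \mathfrak{J}}^c_{{\overline{ \mathfrak{S}}_A}\widetilde{\otimes}{\overline{ \mathfrak{S}}_B}}$ is, by Definition \ref{definitiongenerictensorproduct}, nothing but the maximal ontic completion of the pair $({{{ \overline{S}}}}_{AB},\star)$; so the content to check is precisely that $({{{ \overline{S}}}}_{AB},\star)$ meets the input hypotheses of the ontic completion construction and that, consequently, Theorem \ref{theoremcompletion} hands us back a space of states whose real structure is exactly $({{{ \overline{S}}}}_{AB},\star)$.

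First I would recall, from subsection \ref{subsectiontensorpreliminary}, that the two "groups of conditions" required to build the ontic completion of ${{{ \overline{S}}}}_{AB}$ have already been verified there: the first group (down-completeness as an Inf semi-lattice with bottom element, by Theorem \ref{bottomcompound}, together with generation by maximal elements, by Theorem \ref{theorempuretilde} and Theorem \ref{theoremA4Ptilde}), and the second group (the $\star$ map defined on ${{{{ \overline{S}}}}_{AB}}\smallsetminus \{\bot\}$ is involutive, order-reversing, and satisfies $\neg\,\widehat{\sigma^{\star}\sigma}{}^{{}^{{{{{ \overline{S}}}}_{AB}}}}$, by the theorem immediately following the definition of that $\star$). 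These are exactly the prerequisites (\ref{propstar1})--(\ref{propstar3}) and the generation hypothesis that open subsection \ref{subsectioncompleterealspaceofstates}. Moreover ${{{ \overline{S}}}}_{AB}$ satisfies the Finite Rank Condition whenever $\overline{ \mathfrak{S}}_A$ and $\overline{ \mathfrak{S}}_B$ do --- this is the ambient standing assumption for the ontic-completion subsection and can be read off the expansion formula (\ref{developmentetildeordersimplify}), since a finite approximating subset for $\widehat{\;U\;}$ in ${{{ \overline{S}}}}_{AB}$ can be assembled componentwise from finite approximating subsets in the two factors. Hence the theorem asserting that $({{{ \overline{S}}}}_{AB},{ \mathcal{Q}}_c({{{ \overline{S}}}}_{AB}),cl_c^{{{{ \overline{S}}}}_{AB}},\star)$ satisfies the axiomatic of a real space of states applies verbatim.

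Then I would invoke Theorem \ref{theoremcompletion} with ${\mathfrak{S}}':={{{ \overline{S}}}}_{AB}$: it states that ${ \mathfrak{J}}_{{{{ \overline{S}}}}_{AB}}=cl_c^{{{{ \overline{S}}}}_{AB}}({ \mathcal{Q}}_c({{{ \overline{S}}}}_{AB}))$, i.e. ${{{ \mathfrak{S}}_A}\widehat{\otimes} {{ \mathfrak{S}}_{B}}}$, is a well-defined space of states which admits the real structure $({{{ \overline{S}}}}_{AB},\star)=({\overline{ \mathfrak{S}}_A}\widetilde{\otimes} {\overline{ \mathfrak{S}}_B},\star)$, with the elements of ${\overline{ \mathfrak{S}}_A}\widetilde{\otimes} {\overline{ \mathfrak{S}}_B}$ as real states and the remaining elements as hidden states. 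This is precisely the assertion of the lemma, and no further computation is needed beyond citing the preceding results. The main obstacle, such as it is, is purely bookkeeping: one must make sure that the hypotheses actually assumed at the head of subsection \ref{subsectiontensorpreliminary} about $\overline{ \mathfrak{S}}_A,\overline{ \mathfrak{S}}_B$ (down-completeness, generation by maximal elements, the $\star$ axioms, atomicity, and the Finite Rank Condition) are in force in the present situation --- here they hold because $\overline{ \mathfrak{S}}_A$ and $\overline{ \mathfrak{S}}_B$ are themselves the real-state parts of the ontic completions $(\!( \overline{ \mathfrak{S}}_{A},\star)\!)_c$ and $(\!( \overline{ \mathfrak{S}}_{B},\star)\!)_c$, and each such real-state part, being a real space of states, meets exactly these requirements. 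So the proof is essentially a one-line appeal to Theorem \ref{theoremcompletion} once the verifications of subsection \ref{subsectiontensorpreliminary} are acknowledged.

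\begin{proof}
By Definition \ref{definitiongenerictensorproduct}, ${{{ \mathfrak{S}}_A}\widehat{\otimes} {{ \mathfrak{S}}_{B}}}={ \mathfrak{J}}^c_{{{{ \overline{S}}}}_{AB}}$ is the maximal ontic completion of $({{{ \overline{S}}}}_{AB},\star)$, where ${{{ \overline{S}}}}_{AB}=\overline{ \mathfrak{S}}_A\widetilde{\otimes}\overline{ \mathfrak{S}}_B$. In subsection \ref{subsectiontensorpreliminary} it has been checked that ${{{ \overline{S}}}}_{AB}$ is a down-complete Inf semi-lattice with bottom element $\bot_{{}_{\overline{ \mathfrak{S}}_A}}\widetilde{\otimes}\bot_{{}_{\overline{ \mathfrak{S}}_B}}$ (Theorem \ref{bottomcompound}), that it is generated by its maximal elements (Theorem \ref{theorempuretilde} and Theorem \ref{theoremA4Ptilde}), and that the map $\star$ defined on ${{{{ \overline{S}}}}_{AB}}\smallsetminus \{\bot\}$ is involutive, order-reversing, and satisfies $\neg\,\widehat{\sigma^{\star}\sigma}{}^{{}^{{{{{ \overline{S}}}}_{AB}}}}$ for every $\sigma$. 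Since $\overline{ \mathfrak{S}}_A$ and $\overline{ \mathfrak{S}}_B$ satisfy the Finite Rank Condition, so does ${{{ \overline{S}}}}_{AB}$ (using the expansion (\ref{developmentetildeordersimplify}) to build finite approximants componentwise). Hence $({{{ \overline{S}}}}_{AB},{ \mathcal{Q}}_c({{{ \overline{S}}}}_{AB}),cl_c^{{{{ \overline{S}}}}_{AB}},\star)$ satisfies the axiomatic of a real space of states. Applying Theorem \ref{theoremcompletion} to ${\mathfrak{S}}'={{{ \overline{S}}}}_{AB}$, the set ${ \mathfrak{J}}^c_{{{{ \overline{S}}}}_{AB}}={{{ \mathfrak{S}}_A}\widehat{\otimes} {{ \mathfrak{S}}_{B}}}$ is a well-defined space of states admitting the real structure $({{{ \overline{S}}}}_{AB},\star)=({\overline{ \mathfrak{S}}_A}\widetilde{\otimes} {\overline{ \mathfrak{S}}_B},\star)$, with the elements of ${\overline{ \mathfrak{S}}_A}\widetilde{\otimes} {\overline{ \mathfrak{S}}_B}$ being the real states. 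This is the claimed statement.
\end{proof}
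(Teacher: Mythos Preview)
Your proof is correct and follows exactly the same approach as the paper, which simply states that the lemma is a ``direct consequence of Theorem \ref{theoremcompletion} for ${{{ \mathfrak{S}}_A}\widehat{\otimes} {{ \mathfrak{S}}_{B}}}$ defined as the ontic completion of $({\overline{ \mathfrak{S}}_A}\widetilde{\otimes} {\overline{ \mathfrak{S}}_B},\star)$.'' Your version is more explicit in recalling which results from subsection \ref{subsectiontensorpreliminary} establish the prerequisites, but the argument is identical in substance.
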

\begin{proof}
Direct consequence of Theorem \ref{theoremcompletion} for ${{{ \mathfrak{S}}_A}\widehat{\otimes} {{ \mathfrak{S}}_{B}}}$ defined as the ontic completion of $({\overline{ \mathfrak{S}}_A}\widetilde{\otimes} {\overline{ \mathfrak{S}}_B},\star)$.
\end{proof}

\begin{lemma}\label{generictensorpuretensors}
There exists a map $\iota^{{{ \mathfrak{S}}_A}\widehat{\otimes} {{ \mathfrak{S}}_{B}}}$ from $\overline{{ \mathfrak{S}}_A} {\times} \overline{{ \mathfrak{S}}_{B}}$ to ${{{ \mathfrak{S}}_A}\widehat{\otimes} {{ \mathfrak{S}}_{B}}}$. 
\end{lemma}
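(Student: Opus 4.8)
The plan is to exhibit the map $\iota^{{{ \mathfrak{S}}_A}\widehat{\otimes} {{ \mathfrak{S}}_{B}}}$ explicitly by composing the already-constructed inclusion of pure tensors into the minimal tensor product with the canonical injection of a real space of states into its ontic completion. Recall that by Lemma \ref{lemmainclusiontensor} there is a map $\iota^{{{\overline{S}}}_{AB}} : \overline{ \mathfrak{S}}_A \times \overline{ \mathfrak{S}}_B \longrightarrow {{\overline{S}}}_{AB} = \overline{ \mathfrak{S}}_A\widetilde{\otimes}\overline{ \mathfrak{S}}_B$, sending $(\sigma_A,\sigma_B)$ to $\sigma_A\widetilde{\otimes}\sigma_B$. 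On the other hand, ${{ \mathfrak{S}}_A}\widehat{\otimes} {{ \mathfrak{S}}_{B}}$ is by Definition \ref{definitiongenerictensorproduct} the ontic completion ${ \mathfrak{J}}^c_{{\overline{ \mathfrak{S}}_A}\widetilde{\otimes}{\overline{ \mathfrak{S}}_B}}$, and the discussion following Definition \ref{definitionrealspaceofstates} (together with Theorem \ref{theoremcompletion}) shows that the underlying real space of states ${{\overline{S}}}_{AB}$ injects into its completion ${ \mathfrak{J}}_{{{\overline{S}}}_{AB}}$ via the trivial map $\sigma \mapsto \{\sigma\}$, which is an Inf semi-lattice embedding. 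So the plan is simply to define
\begin{eqnarray}
\iota^{{{ \mathfrak{S}}_A}\widehat{\otimes} {{ \mathfrak{S}}_{B}}}(\sigma_A,\sigma_B) &:=& \{\, \iota^{{{\overline{S}}}_{AB}}(\sigma_A,\sigma_B)\,\} \;=\; \{\, \sigma_A\widetilde{\otimes}\sigma_B\,\},
\end{eqnarray}
identifying as usual the element $\{\sigma_A\widetilde{\otimes}\sigma_B\}$ of ${ \mathfrak{J}}^c_{{{\overline{S}}}_{AB}}$ with $\sigma_A\widetilde{\otimes}\sigma_B$. We adopt the notation $\sigma_A\widehat{\otimes}\sigma_B := \iota^{{{ \mathfrak{S}}_A}\widehat{\otimes} {{ \mathfrak{S}}_{B}}}(\sigma_A,\sigma_B)$.

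The first step is to check that this is well defined, i.e. that $\sigma_A\widetilde{\otimes}\sigma_B$ really is an element of ${{\overline{S}}}_{AB}$ (clear by construction in subsection \ref{subsectionbasicnotionstensorproduct}) and that singletons of elements of ${{\overline{S}}}_{AB}$ genuinely belong to ${ \mathcal{Q}}_c({{\overline{S}}}_{AB})$ and are fixed by $cl_c$; this is exactly the content of property (\ref{RECQ1}) applied with $V=\{\sigma_A\widetilde{\otimes}\sigma_B\}$ (a singleton always has a common upper bound, hence lies in ${ \mathcal{Q}}_c$) together with the axiom $cl_c^{{{\overline{S}}}_{AB}}(\{\sigma\})=\{\sigma\}$ recorded in Definition \ref{definitionrealspaceofstates}. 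Since the hypotheses needed to build the ontic completion of $({\overline{ \mathfrak{S}}_A}\widetilde{\otimes}{\overline{ \mathfrak{S}}_B},\star)$ have been verified throughout subsection \ref{subsectiontensorpreliminary} (down-completeness and bottom element from Theorems \ref{theoremaxiomA1bipartitepre} and \ref{bottomcompound}, generation by pure states from Theorem \ref{theoremA4Ptilde}, and the star axioms from the theorem immediately preceding this subsection), the target ${{ \mathfrak{S}}_A}\widehat{\otimes} {{ \mathfrak{S}}_{B}}$ is indeed a legitimate space of states and the injection of its real structure is available.

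The remaining step is purely cosmetic: one records that the restriction of $\iota^{{{ \mathfrak{S}}_A}\widehat{\otimes} {{ \mathfrak{S}}_{B}}}$ factors through $\iota^{{{\overline{S}}}_{AB}}$ followed by the real-structure injection, so that $\sigma_A\widehat{\otimes}\sigma_B$ coincides, under the identification $\overline{{{ \mathfrak{S}}_A}\widehat{\otimes} {{ \mathfrak{S}}_{B}}} = {\overline{ \mathfrak{S}}_A}\widetilde{\otimes}{\overline{ \mathfrak{S}}_B}$ of Lemma \ref{generictensorstarstructure}, with the pure tensor $\sigma_A\widetilde{\otimes}\sigma_B$. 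I do not expect any real obstacle here — the statement is essentially a bookkeeping lemma asserting that the construction of Definition \ref{definitiongenerictensorproduct} does not lose the pure tensors already present in the minimal tensor product. The only mild subtlety worth flagging is the notational overloading of $\widetilde{\otimes}$ versus $\widehat{\otimes}$ and the tacit identification of real states $\sigma$ with their singletons $\{\sigma\}$ inside the completion; once that identification is made explicit (exactly as in the paragraph after Definition \ref{definitionrealspaceofstates}), the proof is immediate.
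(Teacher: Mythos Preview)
Your proposal is correct and matches the paper's approach; the paper simply records the proof as ``Trivial'', and what you have written is precisely the unpacking of that word via Lemma \ref{lemmainclusiontensor} and the singleton injection of the real structure into its ontic completion.
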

\begin{proof}
Trivial
\end{proof}

%\begin{lemma}\label{generictensortomography},
%\begin{eqnarray}
%&&\hspace{-1cm}\forall \; \left(\bigsqcap{}^{{}^{{\overline{{{ \mathfrak{S}}_A}\widehat{\otimes} {{ \mathfrak{S}}_{B}}}}}}_{i\in I} \sigma_{i,A}\widehat{\otimes} \sigma_{i,B}\right) \in \overline{{{ \mathfrak{S}}_A}\widehat{\otimes} {{ \mathfrak{S}}_{B}}},\; \forall \; \left(\bigsqcap{}^{{}^{{\overline{{{ \mathfrak{S}}_A}\widehat{\otimes} {{ \mathfrak{S}}_{B}}}}}}_{j\in J} \sigma'_{j,A}\widehat{\otimes} \sigma'_{j,B}\right) \in \overline{{{ \mathfrak{S}}_A}\widehat{\otimes} {{ \mathfrak{S}}_{B}}},\nonumber\\
%&&\hspace{-1cm}\left(\forall {\mathfrak{l}}_A\in { \mathfrak{E}}_{A},{\mathfrak{l}}_B\in { \mathfrak{E}}_{B},\;\; \bigwedge{}_{i\in I}{\epsilon}\,{}^{{ \mathfrak{S}}_A}_{{\mathfrak{l}}_A} (\sigma_{i,A})\bullet {\epsilon}\,{}^{{ \mathfrak{S}}_B}_{{\mathfrak{l}}_B} (\sigma_{i,B})=\bigwedge{}_{j\in J}{\epsilon}\,{}^{{ \mathfrak{S}}_A}_{{\mathfrak{l}}_A} (\sigma'_{j,A})\bullet {\epsilon}\,{}^{{ \mathfrak{S}}_B}_{{\mathfrak{l}}_B} (\sigma'_{j,B}) \right)\nonumber\\
%&&\hspace{4cm} \;\Rightarrow\; (\,\bigsqcap{}^{{}^{{\overline{{{ \mathfrak{S}}_A}\widehat{\otimes} {{ \mathfrak{S}}_{B}}}}}}_{i\in I} \sigma_{i,A}\widehat{\otimes} \sigma_{i,B} = \bigsqcap{}^{{}^{{\overline{{{ \mathfrak{S}}_A}\widehat{\otimes} {{ \mathfrak{S}}_{B}}}}}}_{j\in J} \sigma'_{j,A}\widehat{\otimes} \sigma'_{j,B}\,).\;\;\;\;\;\;\;\;\;\;\;
%\end{eqnarray}
%\end{lemma}
%\begin{proof}
%Direct consequence of Lemma \ref{lemmaseparatedminimal}.
%\end{proof}

\begin{lemma}\label{generictensorlocaltomography}
\begin{eqnarray}
&&\hspace{-1cm}\forall \; \left(\bigsqcap{}^{{}^{{{{{ \mathfrak{S}}_A}\widehat{\otimes} {{ \mathfrak{S}}_{B}}}}}}_{i\in I} \sigma_{i,A}\widehat{\otimes} \sigma_{i,B}\right) \in \overline{{{ \mathfrak{S}}_A}\widehat{\otimes} {{ \mathfrak{S}}_{B}}},\; \forall \; \left(\bigsqcap{}^{{}^{{{{{ \mathfrak{S}}_A}\widehat{\otimes} {{ \mathfrak{S}}_{B}}}}}}_{j\in J} \sigma'_{j,A}\widehat{\otimes} \sigma'_{j,B}\right) \in \overline{{{ \mathfrak{S}}_A}\widehat{\otimes} {{ \mathfrak{S}}_{B}}},\nonumber\\
&&\hspace{-1cm}\left(\forall {\mathfrak{l}}_A\in { \mathfrak{E}}_{A},{\mathfrak{l}}_B\in { \mathfrak{E}}_{B},\;\; \bigwedge{}_{i\in I}{\epsilon}\,{}^{{ \mathfrak{S}}_A}_{{\mathfrak{l}}_A} (\sigma_{i,A})\bullet {\epsilon}\,{}^{{ \mathfrak{S}}_B}_{{\mathfrak{l}}_B} (\sigma_{i,B})=\bigwedge{}_{j\in J}{\epsilon}\,{}^{{ \mathfrak{S}}_A}_{{\mathfrak{l}}_A} (\sigma'_{j,A})\bullet {\epsilon}\,{}^{{ \mathfrak{S}}_B}_{{\mathfrak{l}}_B} (\sigma'_{j,B}) \right)\nonumber\\
&&\hspace{4cm} \;\Rightarrow\; (\,\bigsqcap{}^{{}^{{{{{ \mathfrak{S}}_A}\widehat{\otimes} {{ \mathfrak{S}}_{B}}}}}}_{i\in I} \sigma_{i,A}\widehat{\otimes} \sigma_{i,B} = \bigsqcap{}^{{}^{{{{{ \mathfrak{S}}_A}\widehat{\otimes} {{ \mathfrak{S}}_{B}}}}}}_{j\in J} \sigma'_{j,A}\widehat{\otimes} \sigma'_{j,B}\,).\;\;\;\;\;\;\;\;\;\;\;
\end{eqnarray}
\end{lemma}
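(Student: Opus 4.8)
The statement is the tomographic-locality requirement (\ref{tensorseparated}) transported to the level of the ontic completion ${ \mathfrak{S}}_A\widehat{\otimes}{ \mathfrak{S}}_B:={ \mathfrak{J}}^c_{{\overline{ \mathfrak{S}}_A}\widetilde{\otimes}{\overline{ \mathfrak{S}}_B}}$. The key observation is that the real states of ${ \mathfrak{S}}_A\widehat{\otimes}{ \mathfrak{S}}_B$ are by construction exactly the elements of $\overline{ \mathfrak{S}}_A\widetilde{\otimes}\overline{ \mathfrak{S}}_B$ (Lemma \ref{generictensorstarstructure}), and that the effects ${ \mathfrak{l}}_A\in { \mathfrak{E}}_A$, ${ \mathfrak{l}}_B\in { \mathfrak{E}}_B$ appearing in the hypothesis, combined via $\bullet$, are precisely the data encoded in the maps $\nu\,{}^{\overline{{ \mathfrak{S}}_A}\overline{{ \mathfrak{S}}_B}}_{{\mathfrak{l}}_A,{\mathfrak{l}}_B}$ of Definition \ref{definnubullet}. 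So the plan is: first rewrite the hypothesis $\bigwedge_{i}{\epsilon}^{{ \mathfrak{S}}_A}_{{\mathfrak{l}}_A}(\sigma_{i,A})\bullet {\epsilon}^{{ \mathfrak{S}}_B}_{{\mathfrak{l}}_B}(\sigma_{i,B})=\bigwedge_{j}{\epsilon}^{{ \mathfrak{S}}_A}_{{\mathfrak{l}}_A}(\sigma'_{j,A})\bullet {\epsilon}^{{ \mathfrak{S}}_B}_{{\mathfrak{l}}_B}(\sigma'_{j,B})$ for all ${\mathfrak{l}}_A,{\mathfrak{l}}_B$ as the statement that $\{(\sigma_{i,A},\sigma_{i,B})\mid i\in I\}$ and $\{(\sigma'_{j,A},\sigma'_{j,B})\mid j\in J\}$ have the same image under every $\nu\,{}^{\overline{{ \mathfrak{S}}_A}\overline{{ \mathfrak{S}}_B}}_{{\mathfrak{l}}_A,{\mathfrak{l}}_B}$, i.e. they are $\approx$-congruent, i.e. they define the same element of ${{\overline{S}}}_{AB}=\overline{ \mathfrak{S}}_A\widetilde{\otimes}\overline{ \mathfrak{S}}_B$. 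By the very definition of the minimal tensor product, $\bigsqcap^{{{\overline{S}}}_{AB}}_{i\in I}\sigma_{i,A}\widetilde{\otimes}\sigma_{i,B}=\bigsqcap^{{{\overline{S}}}_{AB}}_{j\in J}\sigma'_{j,A}\widetilde{\otimes}\sigma'_{j,B}$.

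\textbf{Second step: lift to the completion.} Having the equality of the two real states in $\overline{ \mathfrak{S}}_A\widetilde{\otimes}\overline{ \mathfrak{S}}_B$, I must conclude equality of the corresponding infima computed in ${ \mathfrak{S}}_A\widehat{\otimes}{ \mathfrak{S}}_B$. This is immediate because the inclusion $\overline{ \mathfrak{S}}_A\widetilde{\otimes}\overline{ \mathfrak{S}}_B\hookrightarrow { \mathfrak{J}}^c_{{\overline{ \mathfrak{S}}_A}\widetilde{\otimes}{\overline{ \mathfrak{S}}_B}}$ is an Inf-semilattice embedding: finite (indeed arbitrary) infima of real states taken in ${ \mathfrak{S}}_A\widehat{\otimes}{ \mathfrak{S}}_B$ agree with those taken in $\overline{ \mathfrak{S}}_A\widetilde{\otimes}\overline{ \mathfrak{S}}_B$, since $\overline{{ \mathfrak{S}}}$ is a sub-Inf-semilattice of its ontic completion (this is exactly the content of the real-space-of-states axioms, cf. the passage after Theorem \ref{theoremcompletion} and the injection $\sigma\mapsto\{\sigma\}$). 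Hence $\sigma_{i,A}\widehat{\otimes}\sigma_{i,B}=\iota(\sigma_{i,A},\sigma_{i,B})$ maps to the same element under this embedding, and the two infima in ${ \mathfrak{S}}_A\widehat{\otimes}{ \mathfrak{S}}_B$ coincide.

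\textbf{Main obstacle.} The only genuinely delicate point is making precise the identification between the abstract effects ${ \mathfrak{l}}_A\in { \mathfrak{E}}_A=\overline{ \mathfrak{E}}_{{ \mathfrak{S}}_A}$ (which live on the completed space ${ \mathfrak{S}}_A$) and the maps $\nu$ (which are built from $\overline{ \mathfrak{S}}_A$ only): one needs that evaluating a real effect ${ \mathfrak{l}}_A={ \mathfrak{l}}_{(\sigma,\sigma')}$ on a real state $\omega\in\overline{ \mathfrak{S}}_A$ gives the same answer whether computed via $\epsilon^{{ \mathfrak{S}}_A}$ on the completion or intrinsically on $\overline{ \mathfrak{S}}_A$, which follows from $\sqsubseteq_{{}_{{ \mathfrak{S}}_A}}$ restricting to $\sqsubseteq_{{}_{\overline{ \mathfrak{S}}_A}}$ on $\overline{ \mathfrak{S}}_A$. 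Once this compatibility is laid out, the hypothesis of the lemma is literally the defining condition $\widetilde{u_{AB}}=\widetilde{u'_{AB}}$ of Lemma \ref{lemmaseparatedminimal}, and the conclusion drops out. I expect the whole proof to be a one-paragraph reduction to Lemma \ref{lemmaseparatedminimal} together with the Inf-embedding property of the ontic completion, with no new computation required.
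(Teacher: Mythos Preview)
Your proposal is correct and is precisely an unpacking of the paper's one-line proof (``Direct consequence of the definition of the tensor product''): you identify the hypothesis with the congruence $\approx$ defining $\overline{S}_{AB}=\overline{\mathfrak{S}}_A\widetilde{\otimes}\overline{\mathfrak{S}}_B$ (Lemma \ref{lemmaseparatedminimal}) and then transport the resulting equality of real states through the Inf-semilattice embedding $\overline{S}_{AB}\hookrightarrow \mathfrak{J}^c_{\overline{S}_{AB}}=\mathfrak{S}_A\widehat{\otimes}\mathfrak{S}_B$. Note that your ``main obstacle'' is not really one: the maps $\nu^{\overline{\mathfrak{S}}_A\overline{\mathfrak{S}}_B}_{\mathfrak{l}_A,\mathfrak{l}_B}$ of Definition \ref{definnubullet} are already written in terms of $\epsilon^{\mathfrak{S}_A},\epsilon^{\mathfrak{S}_B}$, so no separate compatibility check is needed.
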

\begin{proof}
Direct consequence of the definition of the tensor product.
\end{proof}

\begin{lemma}\label{generictensorbimorphic}
We have the two required bimorphic properties for the tensor product. Explicitly, for any $\{\,\sigma_{i,A}\;\vert\; i\in I\,\}\subseteq \overline{{ \mathfrak{S}}_{A}}$, $\{\,\sigma_{j,B}\;\vert\; j\in J\,\}\subseteq \overline{{ \mathfrak{S}}_{B}}$, $\sigma_A\in \overline{{ \mathfrak{S}}_{A}}$ and $\sigma_B\in \overline{{ \mathfrak{S}}_{B}}$, we have
\begin{eqnarray}
&& (\bigsqcap{}^{{}^{{{ \mathfrak{S}}_{A}}}}_{i\in I}\,\sigma_{i,A})\widehat{\otimes} \sigma_B =  \bigsqcap{}^{{}^{{{{{ \mathfrak{S}}_A \widehat{\otimes} { \mathfrak{S}}_{B}}}}}}_{i\in I} (\sigma_{i,A}\widehat{\otimes} \sigma_B),\label{pitensor=tensorpi1pregeneric}\\
&& \sigma_A \widehat{\otimes}  (\bigsqcap{}^{{}^{{{ \mathfrak{S}}_{B}}}}_{i\in I}\,\sigma_{i,B}) =  \bigsqcap{}^{{}^{{{{{ \mathfrak{S}}_A \widehat{\otimes} { \mathfrak{S}}_{B}}}}}}_{i\in I} (\sigma_A \widehat{\otimes} \sigma_{i,B}).\label{pitensor=tensorpi2pregeneric}
\end{eqnarray}
\end{lemma}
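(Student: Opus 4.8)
The plan is to reduce the two bimorphic identities (\ref{pitensor=tensorpi1pregeneric}) and (\ref{pitensor=tensorpi2pregeneric}) for $\widehat{\otimes}$ to the already-established bimorphic identities (\ref{pitensor=tensorpi1preminimal}) and (\ref{pitensor=tensorpi2preminimal}) for the minimal tensor product $\widetilde{\otimes}$, by unwinding the definition of $\widehat{\otimes}$ as the ontic completion ${ \mathfrak{J}}^c_{{\overline{ \mathfrak{S}}_A}\widetilde{\otimes}{\overline{ \mathfrak{S}}_B}}$. Since both $\sigma_{i,A}$, $\sigma_A$ lie in $\overline{ \mathfrak{S}}_A$ and $\sigma_B$ in $\overline{ \mathfrak{S}}_B$, each of the relevant states is a \emph{real} state of ${{ \mathfrak{S}}_A}\widehat{\otimes}{{ \mathfrak{S}}_B}$, i.e. it is identified with the singleton $\{\sigma_{i,A}\widetilde{\otimes}\sigma_B\}\in{ \mathfrak{J}}^c_{{\overline{ \mathfrak{S}}_A}\widetilde{\otimes}{\overline{ \mathfrak{S}}_B}}$ via the canonical injection ${\overline{S}}_{AB}\hookrightarrow{ \mathfrak{J}}^c_{\overline{S}_{AB}}$ (the map $\sigma\mapsto\{\sigma\}$). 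By that injection being an Inf-semilattice embedding (established in subsection \ref{subsectionreconstruction}: $\{\sigma\}\sqcap_{{}_{{ \mathfrak{J}}}}\{\sigma'\}=\{\sigma\sqcap\sigma'\}$), infima of families of real states in $\widehat{\otimes}$ coincide with infima taken in ${\overline{S}}_{AB}=\overline{ \mathfrak{S}}_A\widetilde{\otimes}\overline{ \mathfrak{S}}_B$.

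First I would record, for any family $\{\tau_i\}\subseteq{\overline{S}}_{AB}$, the identity $\bigsqcap{}^{{{ \mathfrak{S}}_A\widehat{\otimes}{ \mathfrak{S}}_B}}_{i\in I}\{\tau_i\} = \{\bigsqcap{}^{{\overline{S}}_{AB}}_{i\in I}\tau_i\}$, which follows because the left side is $cl^{\overline{S}_{AB}}_c$ applied to $\{\tau_i\mid i\in I\}$ (itself admissible, as it has a common upper bound in $\widehat{\otimes}$ exactly when the $\tau_i$ do), and because $cl^{\overline{S}_{AB}}_c$ fixes images of real states coming from an admissible set whose closure is a singleton — but more simply because the injection respects infima (down-complete Inf-semilattice homomorphism) and $\widehat{\otimes}$ is down complete by Lemma \ref{generictensordowncomplete}. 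Then for (\ref{pitensor=tensorpi1pregeneric}): the left-hand side is $\iota^{{{ \mathfrak{S}}_A}\widehat{\otimes} {{ \mathfrak{S}}_{B}}}\bigl((\bigsqcap{}^{{{ \mathfrak{S}}_A}}_{i\in I}\sigma_{i,A}),\sigma_B\bigr)$, which under the injection is $\{(\bigsqcap{}^{\overline{ \mathfrak{S}}_A}_{i\in I}\sigma_{i,A})\widetilde{\otimes}\sigma_B\}$; by (\ref{pitensor=tensorpi1preminimal}) this equals $\{\bigsqcap{}^{\overline{S}_{AB}}_{i\in I}(\sigma_{i,A}\widetilde{\otimes}\sigma_B)\}$; by the recorded infimum-preservation this equals $\bigsqcap{}^{{{ \mathfrak{S}}_A\widehat{\otimes}{ \mathfrak{S}}_B}}_{i\in I}\{\sigma_{i,A}\widetilde{\otimes}\sigma_B\}=\bigsqcap{}^{{{ \mathfrak{S}}_A\widehat{\otimes}{ \mathfrak{S}}_B}}_{i\in I}(\sigma_{i,A}\widehat{\otimes}\sigma_B)$, which is the right-hand side. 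The proof of (\ref{pitensor=tensorpi2pregeneric}) is entirely symmetric, using (\ref{pitensor=tensorpi2preminimal}) instead.

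The main point requiring care — the step I expect to be the real obstacle — is the compatibility of the infimum operations: that infima of real states computed in the ontic completion ${ \mathfrak{J}}^c_{\overline{S}_{AB}}$ agree with infima computed in $\overline{S}_{AB}$, so that one genuinely may transport (\ref{pitensor=tensorpi1preminimal})–(\ref{pitensor=tensorpi2preminimal}) across the injection. This is exactly the content that ${\overline{S}}_{AB}$ embeds as a \emph{sub}-Inf-semilattice of ${ \mathfrak{J}}^c_{\overline{S}_{AB}}$ (formula $\{\sigma\}\sqcap_{{ \mathfrak{J}}}\{\sigma'\}=\{\sigma\sqcap_{\overline{S}_{AB}}\sigma'\}$ and its infinitary version via down-completeness), which is verified in subsection \ref{subsectionreconstruction} in the general completion framework and applies here since, by subsection \ref{subsectiontensorpreliminary} (Theorems \ref{bottomcompound}, \ref{theoremA4Ptilde}, and the construction of $\star$ on ${\overline{S}}_{AB}$), $(\overline{ \mathfrak{S}}_A\widetilde{\otimes}\overline{ \mathfrak{S}}_B,\star)$ satisfies all the prerequisites for its ontic completion to exist. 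Granting that, the identities (\ref{pitensor=tensorpi1pregeneric})–(\ref{pitensor=tensorpi2pregeneric}) follow by the diagram-chase above with no further computation.
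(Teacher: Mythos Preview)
Your proof is correct but follows a different route from the paper. The paper's proof is a one-liner: ``Direct consequence of Lemma \ref{generictensorlocaltomography}.'' That is, the paper argues directly via tomographic locality: both sides of (\ref{pitensor=tensorpi1pregeneric}) are real states of $\overline{{ \mathfrak{S}}_A\widehat{\otimes}{ \mathfrak{S}}_B}$, and one checks that they produce the same value under every product effect $({\mathfrak{l}}_A,{\mathfrak{l}}_B)$ --- which amounts to the distributivity $(\bigwedge_i\epsilon^{{ \mathfrak{S}}_A}_{{\mathfrak{l}}_A}(\sigma_{i,A}))\bullet\epsilon^{{ \mathfrak{S}}_B}_{{\mathfrak{l}}_B}(\sigma_B)=\bigwedge_i(\epsilon^{{ \mathfrak{S}}_A}_{{\mathfrak{l}}_A}(\sigma_{i,A})\bullet\epsilon^{{ \mathfrak{S}}_B}_{{\mathfrak{l}}_B}(\sigma_B))$, i.e.\ property (\ref{distributivitybullet}) --- and then invokes (\ref{tensorseparated}) in the form of Lemma \ref{generictensorlocaltomography}. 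Your approach instead transports the already-established identities (\ref{pitensor=tensorpi1preminimal})--(\ref{pitensor=tensorpi2preminimal}) through the canonical Inf-semilattice embedding $\overline{S}_{AB}\hookrightarrow{ \mathfrak{J}}^c_{\overline{S}_{AB}}$, using that $\{\sigma\}\sqcap_{{}_{{ \mathfrak{J}}}}\{\sigma'\}=\{\sigma\sqcap_{{}_{\overline{S}_{AB}}}\sigma'\}$ (and its infinitary extension). Both arguments are short; the paper's recomputes the $\bullet/\wedge$ distributivity step implicit in the minimal-tensor proof, while yours is more structural and avoids that repetition by recognising that the real-state part of the completion is nothing but $\overline{S}_{AB}$ itself. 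Either way the content is the same, and the step you flagged as the crux --- compatibility of infima under the embedding --- is exactly what makes the two routes equivalent.
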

\begin{proof}
Direct consequence of Lemma \ref{generictensorlocaltomography}.
\end{proof}

\begin{lemma}\label{generictensorpartialtraces}
We have the two following partial traces
\begin{eqnarray}
\hspace{-2cm}\begin{array}{rcrclccrcrcl}
&  &\zeta^{{{ \mathfrak{S}}_{A}}{{ \mathfrak{S}}_{B}}}_{(1)} : \;\;\;\;\;\; \overline{{ \mathfrak{S}}_A \widehat{\otimes} { \mathfrak{S}}_{B}} & \longrightarrow & \overline{{ \mathfrak{S}}_{A}}  & & &  &  &\zeta^{{{ \mathfrak{S}}_{A}}{{ \mathfrak{S}}_{B}}}_{(2)} : \;\;\;\;\;\; \overline{{ \mathfrak{S}}_A \widehat{\otimes} { \mathfrak{S}}_{B}} & \longrightarrow & \overline{{ \mathfrak{S}}_{B}}  \\
& & \bigsqcap{}^{{}^{\overline{{ \mathfrak{S}}_A} \widetilde{\otimes} \overline{{ \mathfrak{S}}_{B}}}}_{i\in I} \sigma_{i,A} \widetilde{\otimes} \sigma_{i,B} & \mapsto & \bigsqcap{}^{{}^{{\overline{{ \mathfrak{S}}_{A}}}}}_{i\in I} \sigma_{i,A}  & & & & & \bigsqcap{}^{{}^{\overline{{ \mathfrak{S}}_A} \widetilde{\otimes} \overline{{ \mathfrak{S}}_{B}}}}_{i\in I} \sigma_{i,A} \widetilde{\otimes} \sigma_{i,B} & \mapsto & \bigsqcap{}^{{}^{{\overline{{ \mathfrak{S}}_{B}}}}}_{i\in I}  \sigma_{i,B}
\end{array}
\end{eqnarray}
\end{lemma}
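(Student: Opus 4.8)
The plan is to establish Lemma \ref{generictensorpartialtraces} by reducing the existence of the partial traces $\zeta^{{{ \mathfrak{S}}_{A}}{{ \mathfrak{S}}_{B}}}_{(1)}$ and $\zeta^{{{ \mathfrak{S}}_{A}}{{ \mathfrak{S}}_{B}}}_{(2)}$ on the ontic completion ${ \mathfrak{S}}_A \widehat{\otimes} { \mathfrak{S}}_{B} = { \mathfrak{J}}^c_{\overline{ \mathfrak{S}}_A \widetilde{\otimes} \overline{ \mathfrak{S}}_B}$ to the already-established partial traces on the minimal tensor product $\overline{ \mathfrak{S}}_A \widetilde{\otimes} \overline{ \mathfrak{S}}_B$ (Theorem \ref{theorempartialtracesminimal}), together with the general morphism-extension construction for ontic completions described at the end of subsection \ref{subsectionmorphismsonticcompletions}. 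Concretely, I would first invoke Theorem \ref{theorempartialtracesminimal} to obtain the homomorphisms $\zeta^{\overline{{ \mathfrak{S}}_{A}}\overline{{ \mathfrak{S}}_{B}}}_{(1)} : \overline{ \mathfrak{S}}_A \widetilde{\otimes} \overline{ \mathfrak{S}}_B \to \overline{{ \mathfrak{S}}_{A}}$ and $\zeta^{\overline{{ \mathfrak{S}}_{A}}\overline{{ \mathfrak{S}}_{B}}}_{(2)} : \overline{ \mathfrak{S}}_A \widetilde{\otimes} \overline{ \mathfrak{S}}_B \to \overline{{ \mathfrak{S}}_{B}}$. Since each $\zeta^{\overline{{ \mathfrak{S}}_{A}}\overline{{ \mathfrak{S}}_{B}}}_{(k)}$ is a morphism of real state spaces (both source and target being the real structures underlying their respective ontic completions), the construction at the end of subsection \ref{subsectionmorphismsonticcompletions} produces a unique morphism $\zeta^{{{ \mathfrak{S}}_{A}}{{ \mathfrak{S}}_{B}}}_{(k)}$ from ${ \mathfrak{S}}_A \widehat{\otimes} { \mathfrak{S}}_{B} = { \mathfrak{J}}^c_{\overline{ \mathfrak{S}}_A \widetilde{\otimes} \overline{ \mathfrak{S}}_B}$ to ${ \mathfrak{S}}_{A} = { \mathfrak{J}}^c_{\overline{ \mathfrak{S}}_A}$ (resp.\ ${ \mathfrak{S}}_{B}$), whose restriction to the real states $\overline{ \mathfrak{S}}_A \widetilde{\otimes} \overline{ \mathfrak{S}}_B$ equals $\zeta^{\overline{{ \mathfrak{S}}_{A}}\overline{{ \mathfrak{S}}_{B}}}_{(k)}$.

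Then I would verify that this extended morphism indeed has the explicit action stated in the lemma, namely $\bigsqcap^{\overline{{ \mathfrak{S}}_A} \widetilde{\otimes} \overline{{ \mathfrak{S}}_{B}}}_{i\in I} \sigma_{i,A} \widetilde{\otimes} \sigma_{i,B} \mapsto \bigsqcap^{\overline{{ \mathfrak{S}}_{A}}}_{i\in I} \sigma_{i,A}$ (and the analogous formula for $\zeta_{(2)}$). This is immediate once one observes that these elements are real states of ${ \mathfrak{S}}_A \widehat{\otimes} { \mathfrak{S}}_{B}$ (they lie in $\overline{ \mathfrak{S}}_A \widetilde{\otimes} \overline{ \mathfrak{S}}_B$), so the restriction formula applies and the value is dictated by Theorem \ref{theorempartialtracesminimal}. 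The homomorphic property on the whole of ${ \mathfrak{S}}_A \widehat{\otimes} { \mathfrak{S}}_{B}$ (preservation of arbitrary infima) is automatic for the left component of any Chu morphism by Theorem \ref{f12cap}; it is inherited from the general properties of morphisms of states/effects Chu spaces established in subsection \ref{subsectionchannels}.

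The one genuinely delicate point is confirming that the passage to ontic completions described at the end of subsection \ref{subsectionmorphismsonticcompletions} is legitimate here: one must check that $\zeta^{\overline{{ \mathfrak{S}}_{A}}\overline{{ \mathfrak{S}}_{B}}}_{(k)}$, being a homomorphism of Inf semi-lattices that moreover respects the star structures (i.e.\ is a real-morphism in the sense of Definition \ref{defrealmorphism}), carries admissible subsets to admissible subsets, so that $cl_c^{\overline{{ \mathfrak{S}}}}(\{ \zeta^{\overline{{ \mathfrak{S}}_{A}}\overline{{ \mathfrak{S}}_{B}}}_{(k)}(\omega) \;\vert\; \omega\in \Theta^{\overline{ \mathfrak{S}}_A \widetilde{\otimes} \overline{ \mathfrak{S}}_B}(\xi)\;\})$ is a well-defined element of ${ \mathfrak{J}}^c_{\overline{ \mathfrak{S}}_A}$ for every $\xi$. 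I expect the main obstacle to be precisely this compatibility check: one needs that images under $\zeta_{(k)}$ of subsets lying in ${\mathcal{Q}}_c(\overline{ \mathfrak{S}}_A \widetilde{\otimes} \overline{ \mathfrak{S}}_B)$ satisfy the defining condition of ${\mathcal{Q}}_c(\overline{ \mathfrak{S}}_A)$ from Definition \ref{definitioncompleterealspaceofstatesQc} — that is, that $cl_c$ of the image avoids any incompatible pair $\{\kappa,\kappa^\star\}$. This follows from the order-reversing and star-preserving nature of the partial trace together with the monotonicity (\ref{propB4c}) and idempotency (\ref{propB5c}) of $cl_c^{\overline{{ \mathfrak{S}}_A}}$, but it is the step that requires care rather than mere bookkeeping. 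Once it is in place, the stated partial traces are exactly the extensions thus produced, and the lemma follows.
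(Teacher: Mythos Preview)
Your argument is correct but substantially over-engineered compared to the paper, which disposes of the lemma in one line: ``Direct consequence of Theorem \ref{theorempartialtracesminimal}.'' The reason this suffices is that the domain of the maps in the lemma is $\overline{{ \mathfrak{S}}_A \widehat{\otimes} { \mathfrak{S}}_{B}}$, the \emph{real structure} of the ontic completion, not the full space ${ \mathfrak{S}}_A \widehat{\otimes} { \mathfrak{S}}_{B}$. By Lemma \ref{generictensorstarstructure} one has $\overline{{ \mathfrak{S}}_A \widehat{\otimes} { \mathfrak{S}}_{B}} = \overline{ \mathfrak{S}}_A \widetilde{\otimes} \overline{ \mathfrak{S}}_B$, so the partial traces required here are literally the homomorphisms $\zeta^{\overline{{ \mathfrak{S}}_{A}}\overline{{ \mathfrak{S}}_{B}}}_{(k)}$ already constructed in Theorem \ref{theorempartialtracesminimal}, merely renamed. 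This is also exactly what the requirement (\ref{requirementpartialtraces}) asks for: a homomorphism from $\overline{{ \mathfrak{S}}_{A}\boxtimes { \mathfrak{S}}_{B}}$ to $\overline{{ \mathfrak{S}}_{A}}$, not from the whole tensor product.

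Consequently, the entire machinery you invoke from subsection \ref{subsectionmorphismsonticcompletions} (extension of morphisms to ontic completions) and the ``delicate point'' about admissibility of images under $cl_c$ are unnecessary detours. Nothing you wrote is incorrect, and the extension you describe would indeed produce a morphism on the full space whose restriction to real states agrees with the one asserted; but the lemma never asks for that extension, so the admissibility check you flag as the main obstacle simply does not arise.
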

\begin{proof}
Direct consequence of Theorem \ref{theorempartialtracesminimal}.
\end{proof}

\begin{theorem}
The tensor product ${ \mathfrak{S}}_{A}\widehat{\otimes} { \mathfrak{S}}_{B}$ defined in Definition \ref{definitiongenerictensorproduct} satisfies the requirements addressed in subsection \ref{subsectioncompoundfirstremarks} for the description of generic bipartite experiments.
\end{theorem}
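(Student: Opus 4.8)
The plan is to assemble the theorem as a direct corollary of the six lemmas immediately preceding it, since each of those lemmas establishes exactly one of the six requirements formulated in subsection \ref{subsectioncompoundfirstremarks}. First I would recall the list of requirements, in the order they appear there: the existence of bipartite mixtures together with a bottom element (equations (\ref{tensorinfimum}) and (\ref{tensorbot})); the existence of a real structure on ${ \mathfrak{S}}_A \widehat{\otimes} { \mathfrak{S}}_B$; the inclusion map $\iota$ for pure tensors (\ref{inclusionpuretensors}); tomographic locality (\ref{tensorseparated}); the two bimorphic identities (\ref{pitensor=tensorpi1pre})--(\ref{pitensor=tensorpi2pre}); and the existence of the two partial traces (\ref{requirementpartialtraces}). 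The proof then simply pairs each requirement with the lemma that discharges it.

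Concretely, the first requirement follows from Lemma \ref{generictensordowncomplete}, which gives down-completeness of the Inf semi-lattice ${ \mathfrak{S}}_A \widehat{\otimes} { \mathfrak{S}}_B$ and identifies its bottom element as $\bot_{{}_{{\overline{ \mathfrak{S}}_A}\widetilde{\otimes} {\overline{ \mathfrak{S}}_B}}}$. The second requirement is Lemma \ref{generictensorstarstructure}, which exhibits $({\overline{ \mathfrak{S}}_A}\widetilde{\otimes} {\overline{ \mathfrak{S}}_B},\star)$ as a real structure; this is legitimate precisely because the work of subsection \ref{subsectiontensorpreliminary} verified that ${\overline{ \mathfrak{S}}_A}\widetilde{\otimes} {\overline{ \mathfrak{S}}_B}$ satisfies both groups of conditions (generation by pure states from Theorem \ref{theoremA4Ptilde}, and the star-map conditions from the theorem on involutivity and order-reversal) needed for its ontic completion to be well-defined. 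The third requirement is Lemma \ref{generictensorpuretensors}; the fourth is Lemma \ref{generictensorlocaltomography}; the fifth is Lemma \ref{generictensorbimorphic} (itself a consequence of tomographic locality); and the sixth is Lemma \ref{generictensorpartialtraces}, which lifts the partial traces of Theorem \ref{theorempartialtracesminimal} through the completion. I would write the proof as a short enumeration keyed to these references.

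Since every ingredient has already been proven, there is no genuine obstacle here — the only care needed is bookkeeping: making sure that the real effects ${ \mathfrak{E}}_A$ and ${ \mathfrak{E}}_B$ appearing in the requirements of subsection \ref{subsectioncompoundfirstremarks} are the same objects $\overline{ \mathfrak{E}}_{{ \mathfrak{S}}_A}$, $\overline{ \mathfrak{E}}_{{ \mathfrak{S}}_B}$ used in the lemmas, and that the notation $\widehat{\otimes}$ in the requirements is being instantiated by the $\widehat{\otimes}$ of Definition \ref{definitiongenerictensorproduct}. The one point worth an explicit sentence is why Definition \ref{definitiongenerictensorproduct} is legitimate at all, i.e. why $(\overline{ \mathfrak{S}}_A \widetilde{\otimes} \overline{ \mathfrak{S}}_B, \star)$ admits an ontic completion; but this was the entire purpose of subsection \ref{subsectiontensorpreliminary}, so I would just cite it. Here is the proof:

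\begin{proof}
All the requirements listed in subsection \ref{subsectioncompoundfirstremarks} have been established one by one in the preceding lemmas. First, the definition of ${ \mathfrak{S}}_{A}\widehat{\otimes}{ \mathfrak{S}}_{B}$ as the ontic completion of $(\overline{ \mathfrak{S}}_{A}\widetilde{\otimes}\overline{ \mathfrak{S}}_{B},\star)$ is legitimate, because the basic conditions needed to build this ontic completion were checked in subsection \ref{subsectiontensorpreliminary} (down-completeness and bottom element from Theorem \ref{bottomcompound}, generation by pure states from Theorem \ref{theoremA4Ptilde}, and the involutivity, order-reversal and $\neg \;\widehat{\sigma^\star \sigma}{}^{{}^{{{{ \overline{S}}}}_{AB}}}$ properties of $\star$ from the corresponding theorem). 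Then:

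\begin{itemize}
\item requirements (\ref{tensorinfimum}) and (\ref{tensorbot}) follow from Lemma \ref{generictensordowncomplete};
\item the requirement that ${ \mathfrak{S}}_{A}\widehat{\otimes}{ \mathfrak{S}}_{B}$ admits a real structure follows from Lemma \ref{generictensorstarstructure};
\item the requirement (\ref{inclusionpuretensors}) on the inclusion of pure tensors follows from Lemma \ref{generictensorpuretensors};
\item the tomographic locality requirement (\ref{tensorseparated}) follows from Lemma \ref{generictensorlocaltomography};
\item the two bimorphic properties (\ref{pitensor=tensorpi1pre}) and (\ref{pitensor=tensorpi2pre}) follow from Lemma \ref{generictensorbimorphic};
\item the existence of the two partial traces (\ref{requirementpartialtraces}) follows from Lemma \ref{generictensorpartialtraces}.
\end{itemize}

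This exhausts the list of requirements, which concludes the proof.
\end{proof}
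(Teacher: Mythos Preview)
Your proposal is correct and follows the same approach as the paper, which simply cites the six lemmas (Lemma \ref{generictensordowncomplete} through Lemma \ref{generictensorpartialtraces}) as jointly discharging the requirements of subsection \ref{subsectioncompoundfirstremarks}. Your version is somewhat more detailed in explicitly pairing each requirement with its lemma and in noting why the ontic completion is well-defined, but the substance is identical.
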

\begin{proof}
According to subsection \ref{subsectioncompoundfirstremarks}, this is a direct consequence of Lemma \ref{generictensordowncomplete}, Lemma \ref{generictensorstarstructure}, Lemma \ref{generictensorpuretensors}, Lemma \ref{generictensorbimorphic}, Lemma \ref{generictensorlocaltomography}, Lemma \ref{generictensorpartialtraces}.
\end{proof}

We note that the minimal tensor product had already proven to be adequate for the description of bipartite deterministic experiments, especially because of the property saying that the minimal tensor product of simplex spaces of states is itself a simplex space of states (Theorem \ref{SASBditribandcup}). \\

Now, the  tensor product ${ \mathfrak{S}}_{A}\widehat{\otimes} { \mathfrak{S}}_{B}$ defined in Definition \ref{definitiongenerictensorproduct} reveals to be adequate for the description of bipartite completely indeterministic experiments on a compound system build from two completely indeterministic individual systems described respectively by the ontic completions $(\!( \overline{ \mathfrak{S}}_{A},\star)\!)_c$ and $(\!( \overline{ \mathfrak{S}}_{B},\star)\!)_c$. This fact is exhibited in the following theorem.

\begin{theorem}\label{theoremirreducibilitytensor}
If $(\overline{{ \mathfrak{S}}_{A}},\star)$ and $(\overline{{ \mathfrak{S}}_{B}},\star)$ are completely indeterministic spaces of states,  then $(\overline{{ \mathfrak{S}}_{A}}\widetilde{\otimes} \overline{{ \mathfrak{S}}_{B}},\star)$ is also a completely indeterministic space of states.
\end{theorem}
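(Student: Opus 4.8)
The goal is to prove that complete indeterminism passes to the minimal tensor product $\overline{\mathfrak{S}}_A \widetilde{\otimes} \overline{\mathfrak{S}}_B$. Recall the defining condition of Definition \ref{completelyindeterministicspace}: for all pure $\sigma,\lambda$ with $\sigma \sqsupseteq \lambda^\star$, there is a pure $\kappa$ with $\kappa \not\sqsupseteq \sigma^\star$ and $\kappa\not\sqsupseteq\lambda^\star$. By Theorem \ref{theorempuretilde} the pure states of $\overline{\mathfrak{S}}_A\widetilde{\otimes}\overline{\mathfrak{S}}_B$ are exactly the $\sigma_A\widetilde{\otimes}\sigma_B$ with $\sigma_A,\sigma_B$ pure, and by the $\star$ defined in the previous subsection, $(\sigma_A\widetilde{\otimes}\sigma_B)^\star = \sigma_A^\star\widetilde{\otimes}\bot \sqcap_{{}_{{{{ \overline{S}}}}_{AB}}} \bot\widetilde{\otimes}\sigma_B^\star$, which is an atom, not a pure state.

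The plan is as follows. First I would take two pure tensors $\sigma := \sigma_A\widetilde{\otimes}\sigma_B$ and $\lambda := \lambda_A\widetilde{\otimes}\lambda_B$ in ${{{ \overline{S}}}}{}^{\,{}^{pure}}_{AB}$ with $\sigma \sqsupseteq_{{}_{{{{ \overline{S}}}}_{AB}}} \lambda^\star$, and I would unpack this relation using the expansion formula (\ref{developmentetildeordersimplify}) applied to $\lambda^\star = \lambda_A^\star\widetilde{\otimes}\bot \sqcap \bot\widetilde{\otimes}\lambda_B^\star$ (so $I$ has two elements). This should yield that $\sigma_A \sqsupseteq_{{}_{\overline{\mathfrak{S}}_A}} \lambda_A^\star\sqcap_{{}_{\overline{\mathfrak{S}}_A}}\bot = \bot$ (trivial) and $\sigma_B\sqsupseteq \bot$ (trivial), plus the genuine content: for the nontrivial splitting $K = \{1\}$ we get $\sigma_A \sqsupseteq_{{}_{\overline{\mathfrak{S}}_A}}\lambda_A^\star$ or $\sigma_B\sqsupseteq_{{}_{\overline{\mathfrak{S}}_B}}\lambda_B^\star$. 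So $\sigma\sqsupseteq\lambda^\star$ in the tensor product reduces to an "or" of the two local relations. By symmetry one gets the analogous statement with $\sigma^\star$ in place of $\lambda^\star$ once I express the relevant $\kappa\not\sqsupseteq\sigma^\star$ in the same way.

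Now I would split into cases. Suppose $\sigma_A\sqsupseteq_{{}_{\overline{\mathfrak{S}}_A}}\lambda_A^\star$ (the other case is symmetric by exchanging the roles of $A$ and $B$). Then complete indeterminism of $(\overline{\mathfrak{S}}_A,\star)$ gives a pure $\kappa_A$ with $\kappa_A\not\sqsupseteq_{{}_{\overline{\mathfrak{S}}_A}}\sigma_A^\star$ and $\kappa_A\not\sqsupseteq_{{}_{\overline{\mathfrak{S}}_A}}\lambda_A^\star$. I would then set $\kappa := \kappa_A\widetilde{\otimes}\mu_B$ for a suitable pure $\mu_B$ (e.g. $\mu_B := \sigma_B$, or any pure state — I expect the choice will be forced only mildly) and check, via (\ref{developmentetildeordersimplify}) applied to $\sigma^\star$ and to $\lambda^\star$, that $\kappa\not\sqsupseteq_{{}_{{{{ \overline{S}}}}_{AB}}}\sigma^\star$ and $\kappa\not\sqsupseteq_{{}_{{{{ \overline{S}}}}_{AB}}}\lambda^\star$. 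The key point is that $\kappa = \kappa_A\widetilde{\otimes}\mu_B \sqsupseteq \sigma^\star = \sigma_A^\star\widetilde{\otimes}\bot\sqcap\bot\widetilde{\otimes}\sigma_B^\star$ would force (by the expansion with the nontrivial split) $\kappa_A\sqsupseteq\sigma_A^\star$ or $\mu_B\sqsupseteq\sigma_B^\star$; the first is excluded by choice of $\kappa_A$, and I would pick $\mu_B$ pure with $\mu_B\not\sqsupseteq\sigma_B^\star$ — this is possible unless $\sigma_B^\star$ were above every pure state, i.e. $\sigma_B^\star = \bot$, which cannot happen since $\star$ maps into $\overline{\mathfrak{S}}_B\smallsetminus\{\bot\}$ and $\sigma_B^\star$ pure-covering-ish forces a genuine atom; more directly, $\mu_B := \sigma_B$ works because $\sigma_B\sqsupseteq\sigma_B^\star$ would contradict $\neg\widehat{\sigma_B\sigma_B^\star}$. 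Similarly $\kappa\sqsupseteq\lambda^\star$ is ruled out by $\kappa_A\not\sqsupseteq\lambda_A^\star$ together with $\mu_B\not\sqsupseteq\lambda_B^\star$ (for the latter I may need to refine the choice of $\mu_B$, taking a pure state below neither $\sigma_B^\star$ nor $\lambda_B^\star$, which exists by atomicity unless both coincide with some forced configuration — I would handle that degenerate subcase separately, possibly invoking complete indeterminism of $\overline{\mathfrak{S}}_B$ as well).

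I expect the main obstacle to be the bookkeeping in the final verification: making a single choice of $\mu_B$ that simultaneously defeats both $\sigma^\star$ and $\lambda^\star$, and handling the degenerate situation where $\sigma_B$ and $\lambda_B$ or their stars interact. The cleanest route is probably to choose $\kappa_B$ pure using complete indeterminism of $\overline{\mathfrak{S}}_B$ whenever $\sigma_B \sqsupseteq \lambda_B^\star$ holds, and otherwise fall back on $\sigma_B$ itself; a short case analysis mirroring the "or" structure above should close everything. The rest — the reduction of $\sqsupseteq$ in the tensor product to local "or"-statements — is entirely mechanical given (\ref{developmentetildeordersimplify}) and the explicit form of $\star$ on ${{{ \overline{S}}}}_{AB}$, so I would state it as a short lemma and then assemble the cases.
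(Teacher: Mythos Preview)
Your approach is correct and matches the paper's proof: reduce $\sigma\sqsupseteq\lambda^\star$ in the tensor product to the disjunction $\sigma_A\sqsupseteq\lambda_A^\star$ or $\sigma_B\sqsupseteq\lambda_B^\star$ via (\ref{developmentetildeordersimplify}), then split into the three cases (both hold / only $A$ / only $B$), invoking complete indeterminism on whichever factor(s) are orthogonal and taking $\mu_B:=\sigma_B$ (the paper uses $\lambda_B$, which works equally well) on the non-orthogonal factor. Your final paragraph already identifies exactly this case structure, so the hesitancy earlier about refining $\mu_B$ is unnecessary---in the case $\sigma_B\not\sqsupseteq\lambda_B^\star$ the choice $\mu_B=\sigma_B$ works directly since $\sigma_B\not\sqsupseteq\sigma_B^\star$ and $\sigma_B\not\sqsupseteq\lambda_B^\star$ are both immediate.
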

\begin{proof}
Let us consider $(\overline{{ \mathfrak{S}}_{A}},\star)$ and $(\overline{{ \mathfrak{S}}_{B}},\star)$ two completely indeterministic spaces of states. Let us consider $\sigma:=\sigma_A\widetilde{\otimes}\sigma_B, \lambda:=\lambda_A\widetilde{\otimes}\lambda_B\in (\overline{{ \mathfrak{S}}_{A}}\widetilde{\otimes}\overline{{ \mathfrak{S}}_{B}})^{{}^{pure}}$ such that $\sigma\underline{\perp}\lambda$. We recall that $\sigma\underline{\perp}\lambda$ means  $\lambda_A\widetilde{\otimes}\lambda_B \sqsupseteq_{{}_{{ \mathfrak{S}}_{A}\widetilde{\otimes} { \mathfrak{S}}_{B}}} (\sigma_A^\star\widetilde{\otimes}\bot_{{}_{\overline{{ \mathfrak{S}}_{B}}}} \sqcap_{{}_{{ \mathfrak{S}}_{A}\widetilde{\otimes} { \mathfrak{S}}_{B}}} \bot_{{}_{\overline{{ \mathfrak{S}}_{A}}}}\widetilde{\otimes} \sigma_B^\star)$ which means $\lambda_A \sqsupseteq_{{}_{{ \mathfrak{S}}_{A}}}\sigma_A^\star$ or $\lambda_B \sqsupseteq_{{}_{{ \mathfrak{S}}_{B}}}\sigma_B^\star$, i.e. $\sigma_A\underline{\perp}\lambda_A$ or $\sigma_B\underline{\perp}\lambda_B$. Let us distinguish the three different cases.\\
(1) $\sigma_A\underline{\perp}\lambda_A$ and $\sigma_B\underline{\perp}\lambda_B$. $(\overline{{ \mathfrak{S}}_{A}},\star)$ and $(\overline{{ \mathfrak{S}}_{B}},\star)$ being two completely indeterministic spaces of states, we deduce that there exists $\kappa_A\in \overline{{ \mathfrak{S}}_{A}}{}^{{}^{pure}}$ and $\kappa_B\in \overline{{ \mathfrak{S}}_{B}}{}^{{}^{pure}}$ such that $\kappa_A\not\!\!\!\underline{\perp} \lambda_A$, $\kappa_A\not\!\!\!\underline{\perp} \sigma_A$, and $\kappa_B\not\!\!\!\underline{\perp} \lambda_B$, $\kappa_B\not\!\!\!\underline{\perp} \sigma_B$, i.e. $\kappa_A \not\sqsupseteq_{{}_{{ \mathfrak{S}}_{A}}}\sigma_A^\star$ and $\kappa_A \not\sqsupseteq_{{}_{{ \mathfrak{S}}_{A}}}\lambda_A^\star$ and $\kappa_B \not\sqsupseteq_{{}_{{ \mathfrak{S}}_{B}}}\lambda_B^\star$ and $\kappa_B \not\sqsupseteq_{{}_{{ \mathfrak{S}}_{B}}}\sigma_B^\star$.  Then, we have $(\kappa_A\widetilde{\otimes}\kappa_B) \not\sqsupseteq_{{}_{{ \mathfrak{S}}_{A}\widetilde{\otimes} { \mathfrak{S}}_{B}}} (\sigma_A^\star\widetilde{\otimes}\bot_{{}_{\overline{{ \mathfrak{S}}_{B}}}} \sqcap_{{}_{{ \mathfrak{S}}_{A}\widetilde{\otimes} { \mathfrak{S}}_{B}}} \bot_{{}_{\overline{{ \mathfrak{S}}_{A}}}}\widetilde{\otimes} \sigma_B^\star)$ and $(\kappa_A\widetilde{\otimes}\kappa_B) \not\sqsupseteq_{{}_{{ \mathfrak{S}}_{A}\widetilde{\otimes} { \mathfrak{S}}_{B}}} (\lambda_A^\star\widetilde{\otimes}\bot_{{}_{\overline{{ \mathfrak{S}}_{B}}}} \sqcap_{{}_{{ \mathfrak{S}}_{A}\widetilde{\otimes} { \mathfrak{S}}_{B}}} \bot_{{}_{\overline{{ \mathfrak{S}}_{A}}}}\widetilde{\otimes} \lambda_B^\star)$, i.e. $(\kappa_A\widetilde{\otimes}\kappa_B) \not\!\!\!\underline{\perp} (\sigma_A\widetilde{\otimes}\sigma_B)$ and $(\kappa_A\widetilde{\otimes}\kappa_B) \not\!\!\!\underline{\perp} (\lambda_A\widetilde{\otimes}\lambda_B)$.\\
(2) $\sigma_A\underline{\perp}\lambda_A$ and $\sigma_B\not\!\!\!\underline{\perp}\lambda_B$. $(\overline{{ \mathfrak{S}}_{A}},\star)$ being a completely indeterministic spaces of states, we deduce that there exists $\kappa_A\in \overline{{ \mathfrak{S}}_{A}}{}^{{}^{pure}}$ such that $\kappa_A\not\!\!\!\underline{\perp} \lambda_A$, $\kappa_A\not\!\!\!\underline{\perp} \sigma_A$, i.e. $\kappa_A \not\sqsupseteq_{{}_{{ \mathfrak{S}}_{A}}}\sigma_A^\star$ and $\kappa_A \not\sqsupseteq_{{}_{{ \mathfrak{S}}_{A}}}\lambda_A^\star$. Let us fix $\kappa_B:=\lambda_B$. We then note that  $\kappa_B \not\sqsupseteq_{{}_{{ \mathfrak{S}}_{B}}}\lambda_B^\star$ and $\kappa_B \not\sqsupseteq_{{}_{{ \mathfrak{S}}_{B}}}\sigma_B^\star$. We then have $(\kappa_A\widetilde{\otimes}\kappa_B) \not\sqsupseteq_{{}_{{ \mathfrak{S}}_{A}\widetilde{\otimes} { \mathfrak{S}}_{B}}} (\sigma_A^\star\widetilde{\otimes}\bot_{{}_{\overline{{ \mathfrak{S}}_{B}}}} \sqcap_{{}_{{ \mathfrak{S}}_{A}\widetilde{\otimes} { \mathfrak{S}}_{B}}} \bot_{{}_{\overline{{ \mathfrak{S}}_{A}}}}\widetilde{\otimes} \sigma_B^\star)$ and $(\kappa_A\widetilde{\otimes}\kappa_B) \not\sqsupseteq_{{}_{{ \mathfrak{S}}_{A}\widetilde{\otimes} { \mathfrak{S}}_{B}}} (\lambda_A^\star\widetilde{\otimes}\bot_{{}_{\overline{{ \mathfrak{S}}_{B}}}} \sqcap_{{}_{{ \mathfrak{S}}_{A}\widetilde{\otimes} { \mathfrak{S}}_{B}}} \bot_{{}_{\overline{{ \mathfrak{S}}_{A}}}}\widetilde{\otimes} \lambda_B^\star)$, i.e.  $(\kappa_A\widetilde{\otimes}\kappa_B) \not\!\!\!\underline{\perp} (\sigma_A\widetilde{\otimes}\sigma_B)$ and $(\kappa_A\widetilde{\otimes}\kappa_B) \not\!\!\!\underline{\perp} (\lambda_A\widetilde{\otimes}\lambda_B)$.\\
(3) idem.\\
This concludes the proof.
\end{proof}

%\begin{remark}
%Even if neither $(\overline{{ \mathfrak{S}}_{A}},\star)$ nor $(\overline{{ \mathfrak{S}}_{B}},\star)$ admit superselection rules,  $(\overline{{ \mathfrak{S}}_{A}}\widetilde{\otimes} \overline{{ \mathfrak{S}}_{B}},\star)$ admits necessarily some superselection rules.\\
%To check this point, let us consider $\alpha\widetilde{\otimes}\beta$ and $\alpha'\widetilde{\otimes}\beta'$ in $(\overline{{ \mathfrak{S}}_{A}}\widetilde{\otimes} \overline{{ \mathfrak{S}}_{B}})^{{}^{pure}}$ and let us assume that $\alpha\not=\beta$ and $\alpha'\not=\beta'$. The expansion (\ref{developmentetildeordersimplify}) shows immediately 
%\begin{eqnarray}
%\underline{(\alpha\widetilde{\otimes}\beta\sqcap_{{}_{\overline{{ \mathfrak{S}}_{A}}\widetilde{\otimes} \overline{{ \mathfrak{S}}_{B}}}}\alpha'\widetilde{\otimes}\beta')}_{{}_{\overline{{ \mathfrak{S}}_{A}}\widetilde{\otimes} \overline{{ \mathfrak{S}}_{B}}}} &=& \{\,\alpha\widetilde{\otimes}\beta\,,\,\alpha'\widetilde{\otimes}\beta'\,\}.
%\end{eqnarray}
%This means explicitly that $\alpha\widetilde{\otimes}\beta \;\wr\; \alpha'\widetilde{\otimes}\beta'$ and then $(\overline{{ \mathfrak{S}}_{A}}\widetilde{\otimes} \overline{{ \mathfrak{S}}_{B}},\star)$ admits some superselection rules.
%\end{remark}

\subsection{Morphisms for the bipartite indeterministic experiments}\label{subsectionsymmetriesbipartiteindeterministic}

Let us consider a morphism $f$ (resp. $g$) from a states space ${ \mathfrak{S}}_{A_1}:={ \mathfrak{J}}^c_{{\overline{ \mathfrak{S}}_{A_1}}}$ (resp.  ${ \mathfrak{S}}_{B_1}:={ \mathfrak{J}}^c_{{\overline{ \mathfrak{S}}_{B_1}}}$) to another states space ${ \mathfrak{S}}_{A_2}:={ \mathfrak{J}}^c_{{\overline{ \mathfrak{S}}_{A_2}}}$ (resp. ${ \mathfrak{S}}_{B_2}:={ \mathfrak{J}}^c_{{\overline{ \mathfrak{S}}_{B_2}}}$). 

We can define the morphism $(f\widetilde{\otimes} g)$ from the states space ${\hat{S}}_{A_1B_1}:={ \mathfrak{J}}^c_{{ \overline{S} }_{A_1B_1}}$ with ${ \overline{S} }_{A_1B_1}={{\overline{ \mathfrak{S}}_{A_1}}\widetilde{\otimes}{\overline{ \mathfrak{S}}_{B_1}}}$ to the states space ${\hat{S} }_{A_2B_2}:={ \mathfrak{J}}^c_{{ \overline{S} }_{A_2B_2}}$ with ${ \overline{S} }_{A_2B_2}=
{{\overline{ \mathfrak{S}}_{A_2}}\widetilde{\otimes}{\overline{ \mathfrak{S}}_{B_2}}}$ by
\begin{eqnarray}
&&\hspace{-1cm}\Theta^{{\overline{ \mathfrak{S}}_{A_2}}\widetilde{\otimes}{\overline{ \mathfrak{S}}_{B_2}}}((f\widetilde{\otimes} g)(\xi))=\label{morphismtensorindeterminism}\\
&&cl_c^{{\overline{ \mathfrak{S}}_{A_2}}\widetilde{\otimes}{\overline{ \mathfrak{S}}_{B_2}}}(\{\bigsqcap{}^{{}^{{ \overline{S} }_{A_2B_2}}}_{i\in I} f(\omega_{i,A_1})\widetilde{\otimes}g(\omega_{i,B_1})\;\vert\; (\bigsqcap{}^{{}^{{ \overline{S} }_{A_1B_1}}}_{i\in I}\omega_{i,A_1}\widetilde{\otimes}\omega_{i,B_1})\in \Theta^{{\overline{ \mathfrak{S}}_{A_1}}\widetilde{\otimes}{\overline{ \mathfrak{S}}_{B_1}}}(\xi)\;\})\;\;\;\;\;\;\;\;\nonumber
\end{eqnarray}
for any $\xi\in {\hat{S}}_{A_1B_1}$.

\section{Ontic completions and contextuality}\label{sectionontic}

In this section, we will clarify the concept of compatibility between real measurement operators and construct the concept of operational description, which will lead us to establish the concept of contextuality. Throughout our demonstration, we will observe how generalized spaces of states with hidden states are not only obtained naturally by a completion procedure from real indeterministic spaces of states (see Section \ref{sectioncharacterizationhiddenstates}), but are also linked to \underline{contextual} empirical models associated with the operational descriptions of these real spaces of states. The notions relative to contextuality presented here are directly inspired by the notions developed in \cite{Abramsky2011}.

\subsection{Notions relative to contextuality}\label{subsectioncontextuality}

During this subsection, the states/effects Chu Space $({ \mathfrak{S}},{ \mathfrak{E}}_{{ \mathfrak{S}}},\epsilon^{{ \mathfrak{S}}})$ will be equipped with a real structure denoted by $(\overline{ \mathfrak{S}},\star)$.  As before, $\overline{ \mathfrak{E}}_{{ \mathfrak{S}}}$ is defined as the sub Inf semi-lattice of ${ \mathfrak{E}}_{ \mathfrak{S}}$ formed by the elements of 
\begin{eqnarray*}
\{\,{ \mathfrak{l}}_{(\sigma,\sigma')}\;\vert\; \sigma,\sigma'\in \overline{ \mathfrak{S}}\smallsetminus \{\bot_{{}_{{ \mathfrak{S}}}}\}, \sigma'\sqsupseteq_{{}_{\overline{ \mathfrak{S}}}} \sigma^\star\,\} \cup \{ { \mathfrak{l}}_{{}_{(\sigma,\centerdot)}}\;\vert\; \sigma\in \overline{ \mathfrak{S}}\;\}\cup \{ { \mathfrak{l}}_{{}_{(\centerdot,\sigma)}}\;\vert\; \sigma\in \overline{ \mathfrak{S}}\;\}\cup \{\, { \mathfrak{l}}_{{}_{(\centerdot,\centerdot)}}\,\}
\end{eqnarray*}

We recall that, ${\mathfrak{B}}$ being a simplex, the $N-$th tensor product ${\mathfrak{B}}^{\widetilde{\otimes}N}$ is also a simplex (cf Lemma \ref{SASBditribandcup}). The space ${\mathfrak{B}}^{\widetilde{\otimes}N}$ is then naturally equipped with a real structure (cf formula (\ref{starsimplex})).  Hence, the reduced space of effects $\overline{ \mathfrak{E}}_{{{\mathfrak{B}}^{\widetilde{\otimes}N}}}$ is naturally defined.

\begin{definition}\label{definitionjointmorphism}
$N$ real measurement maps ${\phi}_1,\cdots,{\phi}_N\in \overline{ \mathfrak{M}}_{ \mathfrak{S}}$ are said to be {\em jointly compatible} iff there exists a map $\Psi_{{}_{({\phi}_1,\cdots,{\phi}_N)}}$, called {\em joint morphism}, defined from ${ \mathfrak{S}}$ to ${\mathfrak{B}}^{\widetilde{\otimes}N}$ and satisfying
\begin{eqnarray}
\forall \sigma_1,\sigma_2\in { \mathfrak{S}},&& \Psi_{{}_{({\phi}_1,\cdots,{\phi}_N)}}(\sigma_1\sqcap_{{}_{{ \mathfrak{S}}}}\sigma_2)=\Psi_{{}_{({\phi}_1,\cdots,{\phi}_N)}}(\sigma_1)\sqcap_{{}_{{\mathfrak{B}}^{\widetilde{\otimes}N}}}\Psi_{{}_{({\phi}_1,\cdots,{\phi}_N)}}(\sigma_2),\label{relationsjointchannelmorphism}\;\;\;\;\;\;\;\;\;\;\;\;\\
&& \Psi_{{}_{({\phi}_1,\cdots,{\phi}_N)}}^\ast (\overline{ \mathfrak{E}}_{{{\mathfrak{B}}^{\widetilde{\otimes}N}}})\subseteq \overline{ \mathfrak{E}}_{ \mathfrak{S}},\label{relationsjointchannelreal}\\
\forall i=1,\cdots,N, && {\phi}_i = \zeta^{{}^{{\mathfrak{B}}\cdots{\mathfrak{B}}}}_{(i)}  \circ \Psi_{{}_{({\phi}_1,\cdots,{\phi}_N)}}.\label{relationsjointchannel}
\end{eqnarray}
We recall that $\zeta^{{}^{{\mathfrak{B}}\cdots{\mathfrak{B}}}}_{(i)} $ is the partial trace projecting the tensor product ${\mathfrak{B}}^{\widetilde{\otimes}N}$ on its $i-$th component. Here, we have denoted by $\Psi_{{}_{({\phi}_1,\cdots,{\phi}_N)}}^\ast$ the right component of the Chu morphism $(\Psi_{{}_{({\phi}_1,\cdots,{\phi}_N)}},\Psi_{{}_{({\phi}_1,\cdots,{\phi}_N)}}^\ast)$.  
\end{definition}

\begin{definition}
A {\em compatibility context} $U$ is defined to be a subset of $\overline{ \mathfrak{E}}_{ \mathfrak{S}}$ such that the elements of $\{\,{ \mathfrak{m}}_{ \mathfrak{l}}\;\vert\; { \mathfrak{l}}\in U\,\}$ form a family of jointly compatible real measurements. A {\em compatibility cover} ${ \mathcal{C}}$ is defined as a family of {\em maximal compatibility contexts} covering the whole set of real measurements. In other words,
\begin{eqnarray}\left\{
\begin{array}{l}
(\bigcup_{{}_{C\in { \mathcal{C}}}} C) = \overline{ \mathfrak{E}}_{ \mathfrak{S}}\\
\forall C,C’ \in { \mathcal{C}},\;\;\; C \subseteq C’ \;\Rightarrow \; C =C’\\
\forall C\in { \mathcal{C}},\; \textit{\rm the elements of}\;\{\,{ \mathfrak{m}}_{ \mathfrak{l}}\;\vert\; { \mathfrak{l}}\in C\,\} \;\textit{\rm are jointly compatible}.
\end{array}
\right.
\end{eqnarray}
%The elements $C \in { \mathcal{C}}$ are called {\em maximal contexts} : each set $C \in { \mathcal{C}}$ corresponds to a maximal set of jointly compatible real measurements. This implies that the elements in a context $C$ represent real measurements that can be jointly performed.
%\\ The {\em compatibility graph} ${ \mathcal{G}}_{ \mathfrak{S}}$ is defined as follows: the vertices are given by the elements of ${ \mathfrak{E}}$ and the edges are given by pairs of vertices corresponding to compatible effects.
\end{definition}

\noindent Let us briefly clarify the situation for ${ \mathfrak{S}}$ being a deterministic space of states.

\begin{theorem}\label{theoremcompatiblemeasurements}
Let ${ \mathfrak{S}}$ be a simplex. \\ 
We note that ${ \mathfrak{S}}=\overline{ \mathfrak{S}}$, $\overline{ \mathfrak{E}}_{ \mathfrak{S}}={ \mathfrak{E}}_{ \mathfrak{S}}$ and all measurements are real.\\
Let us consider any two measurement morphisms ${ \mathfrak{m}}_{{ \mathfrak{l}}_{(\sigma_1,\sigma'_1)}}$ and ${ \mathfrak{m}}_{{ \mathfrak{l}}_{(\sigma_2,\sigma'_2)}}$.  These morphisms are necessarily compatible.
\end{theorem}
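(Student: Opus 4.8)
The plan is to show that any two measurement morphisms on a simplex space of states are jointly compatible by constructing an explicit joint morphism $\Psi$ into ${\mathfrak{B}}\widetilde{\otimes}{\mathfrak{B}}$. First I would exploit the characterization of a simplex from Lemma \ref{lemmaUsimplex}: every $\sigma\in\overline{ \mathfrak{S}}$ decomposes uniquely as $\sigma=\bigsqcap^{{}^{\overline{ \mathfrak{S}}}}U_\sigma$ with $U_\sigma=\underline{\sigma}_{{}_{\overline{ \mathfrak{S}}}}\subseteq\overline{ \mathfrak{S}}^{{}^{pure}}$. Since ${ \mathfrak{S}}=\overline{ \mathfrak{S}}$ here, and since a measurement morphism ${ \mathfrak{m}}_{{ \mathfrak{l}}}$ is an Inf-semilattice homomorphism into ${\mathfrak{B}}$ by (\ref{measurementeffect}) and (\ref{axiomsigmainfsemilattice}), the value ${ \mathfrak{m}}_{{ \mathfrak{l}}}(\sigma)$ is determined by the values on the pure states above $\sigma$. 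Concretely, for a deterministic measurement ${ \mathfrak{m}}_{{ \mathfrak{l}}_{(\sigma_i,\sigma_i')}}$ (every measurement on a simplex is a real deterministic measurement by Definition \ref{deterministicmeasurement} and Definition \ref{deterministicspace}), we have ${ \mathfrak{m}}_{{ \mathfrak{l}}_{(\sigma_i,\sigma_i')}}(\omega)\in\{\textit{\bf Y},\textit{\bf N}\}$ for every pure $\omega$, so each measurement induces a two-valued partition of $\overline{ \mathfrak{S}}^{{}^{pure}}$.

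The key step is to define $\Psi$ on pure states and extend by infima. For each $\omega\in\overline{ \mathfrak{S}}^{{}^{pure}}$, set $\Psi(\omega)$ to be the pure state of ${\mathfrak{B}}\widetilde{\otimes}{\mathfrak{B}}$ equal to $\big({ \mathfrak{m}}_{{ \mathfrak{l}}_{(\sigma_1,\sigma_1')}}(\omega)\big)\widetilde{\otimes}\big({ \mathfrak{m}}_{{ \mathfrak{l}}_{(\sigma_2,\sigma_2')}}(\omega)\big)$; this is legitimate because ${\mathfrak{B}}\widetilde{\otimes}{\mathfrak{B}}$ is a simplex (Lemma \ref{SASBditribandcup}) whose pure states are precisely the pure tensors $u\widetilde{\otimes}v$ with $u,v\in{\mathfrak{B}}^{{}^{pure}}=\{\textit{\bf Y},\textit{\bf N}\}$ (Theorem \ref{theorempuretilde}). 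Then I extend: $\Psi(\sigma):=\bigsqcap^{{}^{{\mathfrak{B}}\widetilde{\otimes}{\mathfrak{B}}}}_{\omega\in\underline{\sigma}_{{}_{\overline{ \mathfrak{S}}}}}\Psi(\omega)$. One must check that $\Psi$ is a well-defined Inf-semilattice homomorphism, i.e. (\ref{relationsjointchannelmorphism}); this follows from $\underline{\sigma_1\sqcap_{{}_{ \mathfrak{S}}}\sigma_2}_{{}_{\overline{ \mathfrak{S}}}}=\underline{\sigma_1}_{{}_{\overline{ \mathfrak{S}}}}\cup\underline{\sigma_2}_{{}_{\overline{ \mathfrak{S}}}}$ together with associativity of $\sqcap$. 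Next, the partial traces: by Theorem \ref{theorempartialtracesminimal}, $\zeta^{{\mathfrak{B}}{\mathfrak{B}}}_{(i)}(u_1\widetilde{\otimes}u_2)=u_i$, so on pure states $\zeta^{{\mathfrak{B}}{\mathfrak{B}}}_{(1)}\circ\Psi(\omega)={ \mathfrak{m}}_{{ \mathfrak{l}}_{(\sigma_1,\sigma_1')}}(\omega)$ and likewise for the second component; since both $\zeta^{{\mathfrak{B}}{\mathfrak{B}}}_{(i)}$ and the ${ \mathfrak{m}}$'s are homomorphisms and pure states generate, (\ref{relationsjointchannel}) holds on all of ${ \mathfrak{S}}$. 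Finally, condition (\ref{relationsjointchannelreal}) is automatic: on a simplex $\overline{ \mathfrak{E}}_{ \mathfrak{S}}={ \mathfrak{E}}_{ \mathfrak{S}}$ and $\overline{ \mathfrak{E}}_{{\mathfrak{B}}\widetilde{\otimes}{\mathfrak{B}}}={ \mathfrak{E}}_{{\mathfrak{B}}\widetilde{\otimes}{\mathfrak{B}}}$, so $\Psi^\ast$ maps real effects to real effects vacuously.

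Alternatively, and perhaps more cleanly, I would invoke the universal property of the canonical tensor product: on a simplex ${ \mathfrak{S}}\otimes{\mathfrak{B}}$ — wait, what I actually want is to use the bi-homomorphism $\overline{ \mathfrak{S}}\times\overline{ \mathfrak{S}}\to{\mathfrak{B}}\otimes{\mathfrak{B}}$, but the cleaner route is: the pair of measurements gives a single map $\sigma\mapsto\big({ \mathfrak{m}}_{{ \mathfrak{l}}_{(\sigma_1,\sigma_1')}}(\sigma),{ \mathfrak{m}}_{{ \mathfrak{l}}_{(\sigma_2,\sigma_2')}}(\sigma)\big)$ into the product poset, and on simplices the canonical tensor product coincides with (the finite part of) the minimal one by Theorem \ref{theoremsqsubseteqPAB=SAB}, so the diagonal-type map into ${\mathfrak{B}}\widetilde{\otimes}{\mathfrak{B}}$ is the required $\Psi$. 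I expect the main obstacle to be the bookkeeping verification that the extension-by-infima $\Psi$ genuinely preserves arbitrary (not just binary) infima and that its value on a general $\sigma$ really lands in ${\mathfrak{B}}\widetilde{\otimes}{\mathfrak{B}}$ rather than in some larger completion — but this is controlled precisely by the uniqueness-of-decomposition property (\ref{simplexdecompunique}) of simplices and by Theorem \ref{SASBditribandcup} guaranteeing ${\mathfrak{B}}\widetilde{\otimes}{\mathfrak{B}}$ is again a simplex, so no hidden states intervene. The argument is essentially the statement that classical (simplex) systems carry no incompatibility, mirroring the remark made after Definition \ref{deterministicmeasurement}.
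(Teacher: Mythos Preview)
Your approach is essentially the same as the paper's: define $\Psi$ on pure states and extend by infima, using the simplex decomposition (\ref{simplexdecompunique}) to guarantee well-definedness. The paper writes out a nine-case table for $\Psi(\omega)$ on pure $\omega$, covering all combinations of values in $\{\textit{\bf Y},\textit{\bf N},\bot\}$ for the two measurements, and your compact formula $\Psi(\omega)=\mathfrak{m}_{\mathfrak{l}_1}(\omega)\widetilde{\otimes}\mathfrak{m}_{\mathfrak{l}_2}(\omega)$ is exactly that table collapsed via the bimorphic identities (\ref{pitensor=tensorpi1preminimal})--(\ref{pitensor=tensorpi2preminimal}).

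There is one incorrect step in your justification, though it does not break the argument. You assert that ``every measurement on a simplex is a real deterministic measurement'', but Definition~\ref{deterministicspace} only guarantees that $\mathfrak{m}_{\mathfrak{l}_{(\sigma,\sigma^\star)}}$ is deterministic for $\sigma$ pure; a general effect $\mathfrak{l}_{(\sigma_1,\sigma_1')}$ with $\sigma_1'\not=\sigma_1^\star$ can perfectly well evaluate to $\bot$ on a pure state. Consequently $\Psi(\omega)$ need not be a pure state of $\mathfrak{B}\widetilde{\otimes}\mathfrak{B}$ as you claim. This is harmless: $\Psi(\omega)$ is still a well-defined element of $\mathfrak{B}\widetilde{\otimes}\mathfrak{B}$, and the partial-trace identity $\zeta^{\mathfrak{B}\mathfrak{B}}_{(i)}(u\widetilde{\otimes}v)=u$ (resp.\ $v$) holds for arbitrary $u,v\in\mathfrak{B}$, not just pure ones, so (\ref{relationsjointchannel}) still follows. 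Your homomorphism check via $\underline{\sigma_1\sqcap\sigma_2}=\underline{\sigma_1}\cup\underline{\sigma_2}$ is correct and is precisely the role of the simplex hypothesis. The alternative route you sketch through the universal property of $\otimes$ is unnecessary here and, as you yourself note mid-sentence, does not quite line up.
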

\begin{proof}
We define the map $\Psi$ as a map from ${ \mathfrak{S}}$ to ${ \mathfrak{B}}\widetilde{\otimes}{ \mathfrak{B}}$ which satisfies, for any $\sigma\in { \mathfrak{S}}$, 
\begin{eqnarray}
\Psi (\sigma) &:=& \bigsqcap{}^{{}^{{ \mathfrak{B}}\widetilde{\otimes}{ \mathfrak{B}}}}_{\sigma'\in \underline{\sigma}_{{}_{{ \mathfrak{S}}}}} \Psi (\sigma')\label{equalitiypsisigma}
\end{eqnarray}
and for any $\sigma\in { \mathfrak{S}}^{{}^{pure}}=Max({ \mathfrak{S}})$
\begin{eqnarray}
\hspace{-1cm} && \left\{
\begin{array}{ll}
\Psi(\sigma)=\textit{\bf Y}\widetilde{\otimes}\textit{\bf Y} 
&\textit{\rm if} \;\; \epsilon^{{ \mathfrak{S}}}_{{ \mathfrak{l}}_{(\sigma_1,\sigma'_1)}}(\sigma)=\textit{\bf Y}\;\; \textit{\rm and} \;\; \epsilon^{{ \mathfrak{S}}}_{{ \mathfrak{l}}_{(\sigma_2,\sigma'_2)}}(\sigma)=\textit{\bf Y},\\
\Psi(\sigma)=\textit{\bf Y}\widetilde{\otimes}\textit{\bf Y}\sqcap_{{}_{{ \mathfrak{B}}\widetilde{\otimes}{ \mathfrak{B}}}}  \textit{\bf Y}\widetilde{\otimes}\textit{\bf N} &\textit{\rm if} \;\; \epsilon^{{ \mathfrak{S}}}_{{ \mathfrak{l}}_{(\sigma_1,\sigma'_1)}}(\sigma)=\textit{\bf Y} \;\textit{\rm and}\;  \epsilon^{{ \mathfrak{S}}}_{{ \mathfrak{l}}_{(\sigma_2,\sigma'_2)}}(\sigma)=\bot, \\
\Psi(\sigma)=\textit{\bf Y}\widetilde{\otimes}\textit{\bf Y}\sqcap_{{}_{{ \mathfrak{B}}\widetilde{\otimes}{ \mathfrak{B}}}}  \textit{\bf N}\widetilde{\otimes}\textit{\bf Y} &\textit{\rm if} \;\; \epsilon^{{ \mathfrak{S}}}_{{ \mathfrak{l}}_{(\sigma_1,\sigma'_1)}}(\sigma)=\bot \;\textit{\rm and}\;  \epsilon^{{ \mathfrak{S}}}_{{ \mathfrak{l}}_{(\sigma_2,\sigma'_2)}}(\sigma)=\textit{\bf Y}, \\
\Psi(\sigma)=\textit{\bf N}\widetilde{\otimes}\textit{\bf N} &
\textit{\rm if} \;\; \epsilon^{{ \mathfrak{S}}}_{{ \mathfrak{l}}_{(\sigma_1,\sigma'_1)}}(\sigma)=\textit{\bf N}\;\; \textit{\rm and} \;\; \epsilon^{{ \mathfrak{S}}}_{{ \mathfrak{l}}_{(\sigma_2,\sigma'_2)}}(\sigma)=\textit{\bf N},\\
\Psi(\sigma)=\textit{\bf N}\widetilde{\otimes}\textit{\bf N}\sqcap_{{}_{{ \mathfrak{B}}\widetilde{\otimes}{ \mathfrak{B}}}}  \textit{\bf N}\widetilde{\otimes}\textit{\bf Y} &\textit{\rm if} \;\; \epsilon^{{ \mathfrak{S}}}_{{ \mathfrak{l}}_{(\sigma_1,\sigma'_1)}}(\sigma)=\textit{\bf N} \;\textit{\rm and}\;  \epsilon^{{ \mathfrak{S}}}_{{ \mathfrak{l}}_{(\sigma_2,\sigma'_2)}}(\sigma)=\bot, \\
\Psi(\sigma)=\textit{\bf N}\widetilde{\otimes}\textit{\bf N}\sqcap_{{}_{{ \mathfrak{B}}\widetilde{\otimes}{ \mathfrak{B}}}}  \textit{\bf Y}\widetilde{\otimes}\textit{\bf N} &\textit{\rm if} \;\; \epsilon^{{ \mathfrak{S}}}_{{ \mathfrak{l}}_{(\sigma_1,\sigma'_1)}}(\sigma)=\bot\;\textit{\rm and}\;  \epsilon^{{ \mathfrak{S}}}_{{ \mathfrak{l}}_{(\sigma_2,\sigma'_2)}}(\sigma)=\textit{\bf N} , \\
\Psi(\sigma)=\textit{\bf Y}\widetilde{\otimes}\textit{\bf N} &
\textit{\rm if} \;\; \epsilon^{{ \mathfrak{S}}}_{{ \mathfrak{l}}_{(\sigma_1,\sigma'_1)}}(\sigma)=\textit{\bf Y}\;\; \textit{\rm and} \;\; \epsilon^{{ \mathfrak{S}}}_{{ \mathfrak{l}}_{(\sigma_2,\sigma'_2)}}(\sigma)=\textit{\bf N},\\
\Psi(\sigma)=\textit{\bf N}\widetilde{\otimes}\textit{\bf Y} &
\textit{\rm if} \;\; \epsilon^{{ \mathfrak{S}}}_{{ \mathfrak{l}}_{(\sigma_1,\sigma'_1)}}(\sigma)=\textit{\bf N}\;\; \textit{\rm and} \;\; \epsilon^{{ \mathfrak{S}}}_{{ \mathfrak{l}}_{(\sigma_2,\sigma'_2)}}(\sigma)=\textit{\bf Y},\\
\Psi(\sigma)=\bot \widetilde{\otimes}\bot &
\textit{\rm if} \;\; \epsilon^{{ \mathfrak{S}}}_{{ \mathfrak{l}}_{(\sigma_1,\sigma'_1)}}(\sigma)=\bot\;\; \textit{\rm and} \;\; \epsilon^{{ \mathfrak{S}}}_{{ \mathfrak{l}}_{(\sigma_2,\sigma'_2)}}(\sigma)=\bot.
\end{array}\right. \;\;\;\;\;\;\;\;\label{equalitiypsisigmapure}
\end{eqnarray}
We observe that $\Psi_{({ \mathfrak{m}}_{{ \mathfrak{l}}_{(\sigma_1,\sigma'_1)}},{ \mathfrak{m}}_{{ \mathfrak{l}}_{(\sigma_2,\sigma'_2)}})}$ is {unambiguously} defined by (\ref{equalitiypsisigma}) because ${ \mathfrak{S}}$ is a simplex and then satisfies (\ref{simplexdecompunique}).\\
Secondly, $\Psi_{({ \mathfrak{m}}_{{ \mathfrak{l}}_{(\sigma_1,\sigma'_1)}},{ \mathfrak{m}}_{{ \mathfrak{l}}_{(\sigma_2,\sigma'_2)}})}$ is a morphism by construction.\\
Endly, we can check easily for any $\sigma\in { \mathfrak{S}}$ (the proof begins on each expressions of (\ref{equalitiypsisigmapure}) by a simple computation and is extended to the whole set of elements of ${\mathfrak{S}}$ using the expression  (\ref{equalitiypsisigma}) and the homomorphic properties of ${ \mathfrak{m}}_{{ \mathfrak{l}}_{(\sigma_1,\sigma'_1)}}$ and ${ \mathfrak{m}}_{{ \mathfrak{l}}_{(\sigma_2,\sigma'_2)}}$)
\begin{eqnarray}
&&(\zeta^{{}^{{\mathfrak{B}}{\mathfrak{B}}}}_{(1)} \circ \Psi_{({ \mathfrak{m}}_{{ \mathfrak{l}}_{(\sigma_1,\sigma'_1)}},{ \mathfrak{m}}_{{ \mathfrak{l}}_{(\sigma_2,\sigma'_2)}})})(\sigma)={ \mathfrak{m}}_{{ \mathfrak{l}}_{(\sigma_1,\sigma'_1)}}(\sigma)\\
&&(\zeta^{{}^{{\mathfrak{B}}{\mathfrak{B}}}}_{(2)} \circ \Psi_{({ \mathfrak{m}}_{{ \mathfrak{l}}_{(\sigma_1,\sigma'_1)}},{ \mathfrak{m}}_{{ \mathfrak{l}}_{(\sigma_2,\sigma'_2)}})})(\sigma)={ \mathfrak{m}}_{{ \mathfrak{l}}_{(\sigma_2,\sigma'_2)}}(\sigma).
\end{eqnarray}
$\Psi$ is then the joint morphism $\Psi_{({ \mathfrak{m}}_{{ \mathfrak{l}}_{(\sigma_1,\sigma'_1)}},{ \mathfrak{m}}_{{ \mathfrak{l}}_{(\sigma_2,\sigma'_2)}})}$.
\end{proof}

\begin{theorem}\label{theoremNcompatiblemeasurements}
If ${ \mathfrak{S}}$ is a simplex then the whole set of elements of $\{\,{ \mathfrak{m}}_{{ \mathfrak{l}}} \;\vert\; {{ \mathfrak{l}}}\in { \mathfrak{E}}_{ \mathfrak{S}}\,\}$ are jointly compatible measurement maps.  In other words, the compatibility cover ${ \mathcal{C}}$ is then reduced to the singleton $\{\overline{ \mathfrak{E}}_{ \mathfrak{S}}\}$.
\end{theorem}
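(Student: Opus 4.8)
The plan is to prove Theorem \ref{theoremNcompatiblemeasurements} by induction on $N$, building an explicit joint morphism $\Psi_{({ \mathfrak{m}}_{{ \mathfrak{l}}_1},\cdots,{ \mathfrak{m}}_{{ \mathfrak{l}}_N})}$ from ${ \mathfrak{S}}$ to ${ \mathfrak{B}}^{\widetilde{\otimes}N}$ that mimics the construction already carried out for $N=2$ in Theorem \ref{theoremcompatiblemeasurements}. The base cases $N=1$ (trivial, take $\Psi={ \mathfrak{m}}_{{ \mathfrak{l}}_1}$) and $N=2$ (Theorem \ref{theoremcompatiblemeasurements}) are available. For general $N$, the natural definition is to fix values on pure states: for $\sigma\in { \mathfrak{S}}^{{}^{pure}}=Max({ \mathfrak{S}})$, each evaluation $\epsilon^{{ \mathfrak{S}}}_{{ \mathfrak{l}}_i}(\sigma)$ lies in $\{\textbf{Y},\textbf{N},\bot\}$, and we set
\begin{eqnarray}
\Psi_{({ \mathfrak{m}}_{{ \mathfrak{l}}_1},\cdots,{ \mathfrak{m}}_{{ \mathfrak{l}}_N})}(\sigma) &:=& \bigsqcap{}^{{}^{{ \mathfrak{B}}^{\widetilde{\otimes}N}}} \{\, b_1\widetilde{\otimes}\cdots\widetilde{\otimes} b_N \;\vert\; b_i\in { \mathfrak{B}},\; b_i\sqsupseteq_{{}_{ \mathfrak{B}}} \epsilon^{{ \mathfrak{S}}}_{{ \mathfrak{l}}_i}(\sigma),\; \textbf{Y}\in \{b_1,\ldots,b_N\} \textit{ or } \textbf{N}\in \{b_1,\ldots,b_N\}\,\}\nonumber
\end{eqnarray}
(with the convention that when all $\epsilon^{{ \mathfrak{S}}}_{{ \mathfrak{l}}_i}(\sigma)=\bot$ one puts $\Psi(\sigma)=\bot^{\widetilde{\otimes}N}$, and when some $\epsilon^{{ \mathfrak{S}}}_{{ \mathfrak{l}}_i}(\sigma)\not=\bot$ this is a finite infimum of pure tensors refining the "definite" data while leaving undetermined the components that were $\bot$), and then extend to all of ${ \mathfrak{S}}$ by $\Psi(\sigma):=\bigsqcap{}^{{}^{{ \mathfrak{B}}^{\widetilde{\otimes}N}}}_{\sigma'\in \underline{\sigma}_{{}_{{ \mathfrak{S}}}}} \Psi(\sigma')$, exactly as in (\ref{equalitiypsisigma}).

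First I would verify that this extension is \emph{unambiguous}: since ${ \mathfrak{S}}$ is a simplex, Lemma \ref{lemmaUsimplex} gives the unique decomposition $\sigma=\bigsqcap{}^{{}^{{ \mathfrak{S}}}}\underline{\sigma}_{{}_{{ \mathfrak{S}}}}$, so the formula for $\Psi(\sigma)$ depends only on $\sigma$; this is the same point invoked in the proof of Theorem \ref{theoremcompatiblemeasurements}. Next I would check that $\Psi$ is a homomorphism of Inf semi-lattices, i.e. property (\ref{relationsjointchannelmorphism}): by construction $\Psi$ commutes with arbitrary infima over pure states, hence with all infima, since $\underline{\sigma_1\sqcap_{{}_{{ \mathfrak{S}}}}\sigma_2}_{{}_{{ \mathfrak{S}}}}=\underline{\sigma_1}_{{}_{{ \mathfrak{S}}}}\cup\underline{\sigma_2}_{{}_{{ \mathfrak{S}}}}$ in a simplex. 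Then (\ref{relationsjointchannel}) must be checked: applying the $i$-th partial trace $\zeta^{{}^{{ \mathfrak{B}}\cdots{ \mathfrak{B}}}}_{(i)}$ (Theorem \ref{theorempartialtracesminimal}) to $\Psi(\sigma)$ for $\sigma$ pure recovers $\epsilon^{{ \mathfrak{S}}}_{{ \mathfrak{l}}_i}(\sigma)={ \mathfrak{m}}_{{ \mathfrak{l}}_i}(\sigma)$ — one reads this off the defining formula since the $b_i$-component of the infimum collapses to $\epsilon^{{ \mathfrak{S}}}_{{ \mathfrak{l}}_i}(\sigma)$ — and then the homomorphic property propagates the identity $\zeta^{{}^{{ \mathfrak{B}}\cdots{ \mathfrak{B}}}}_{(i)}\circ\Psi={ \mathfrak{m}}_{{ \mathfrak{l}}_i}$ to all of ${ \mathfrak{S}}$ via (\ref{equalitiypsisigma}) and the homomorphy of ${ \mathfrak{m}}_{{ \mathfrak{l}}_i}$, just as in Theorem \ref{theoremcompatiblemeasurements}. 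Finally, (\ref{relationsjointchannelreal}) holds automatically here: since ${ \mathfrak{S}}$ is a simplex, $\overline{ \mathfrak{E}}_{ \mathfrak{S}}={ \mathfrak{E}}_{ \mathfrak{S}}$ and likewise for ${ \mathfrak{B}}^{\widetilde{\otimes}N}$, so $\Psi^\ast$ maps effects to effects trivially.

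This shows every finite family $\{{ \mathfrak{m}}_{{ \mathfrak{l}}_1},\cdots,{ \mathfrak{m}}_{{ \mathfrak{l}}_N}\}$ is jointly compatible, hence $\overline{ \mathfrak{E}}_{ \mathfrak{S}}={ \mathfrak{E}}_{ \mathfrak{S}}$ is itself a compatibility context; being the whole set of real effects it is maximal, and it covers $\overline{ \mathfrak{E}}_{ \mathfrak{S}}$ trivially, so the compatibility cover ${ \mathcal{C}}$ reduces to $\{\overline{ \mathfrak{E}}_{ \mathfrak{S}}\}$. The main obstacle I anticipate is making the definition of $\Psi(\sigma)$ on pure states genuinely well-posed when several of the $\epsilon^{{ \mathfrak{S}}}_{{ \mathfrak{l}}_i}(\sigma)$ equal $\bot$: one needs the image to be a legitimate element of the simplex ${ \mathfrak{B}}^{\widetilde{\otimes}N}$ (a finite infimum of pure tensors in $\{\textbf{Y},\textbf{N}\}^{\widetilde{\otimes}N}$) and one must confirm, using the explicit description of ${ \mathfrak{B}}^{\widetilde{\otimes}N}$ and the expansion (\ref{developmentetildeordersimplify}), that this element genuinely has $i$-th partial trace equal to $\bot$ exactly when $\epsilon^{{ \mathfrak{S}}}_{{ \mathfrak{l}}_i}(\sigma)=\bot$; the $N=2$ table (\ref{equalitiypsisigmapure}) is the pattern to generalize, and the bookkeeping — rather than any conceptual difficulty — is where care is required.
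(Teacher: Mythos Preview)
Your proposal is correct and takes essentially the same approach as the paper, which simply states that the result is ``straightforward using the generalization of the construction used in previous Theorem for the joint-morphism.'' You have spelled out that generalization explicitly---defining $\Psi$ on pure states via the natural $N$-fold analogue of the table (\ref{equalitiypsisigmapure}), extending by infima using the simplex property (Lemma \ref{lemmaUsimplex}), and checking the three conditions (\ref{relationsjointchannelmorphism})--(\ref{relationsjointchannel})---which is exactly what the paper leaves implicit; your formula for $\Psi(\sigma)$ on pure states could be written more cleanly as the infimum over pure tensors $b_1\widetilde{\otimes}\cdots\widetilde{\otimes}b_N$ with each $b_i\in\{\textbf{Y},\textbf{N}\}$ and $b_i\sqsupseteq_{{}_{\mathfrak{B}}}\epsilon^{\mathfrak{S}}_{\mathfrak{l}_i}(\sigma)$, but this is cosmetic.
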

\begin{proof}
Straightforward using the generalization of the construction used in previous Theorem for the joint-morphism.
\end{proof}

\noindent Let us now come back to the generic case for ${\mathfrak{S}}$. We intent to clarify the structure of contexts.

\begin{lemma}\label{genmeasurementselfcompatible}
If the $N-$uple of real measurement maps $({ \mathfrak{m}}_{{ \mathfrak{l}}_1},\cdots, { \mathfrak{m}}_{{ \mathfrak{l}}_N})$ are jointly compatible, then the $(N+1)-$uple of real measurement maps $({ \mathfrak{m}}_{{ \mathfrak{l}}_1},{ \mathfrak{m}}_{{ \mathfrak{l}}_1},\cdots, { \mathfrak{m}}_{{ \mathfrak{l}}_N})$ are jointly-compatible, for any ${ \mathfrak{l}}_1,\cdots,{ \mathfrak{l}}_N\in { \mathfrak{E}}$. 
\end{lemma}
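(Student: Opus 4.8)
The statement says: if $({ \mathfrak{m}}_{{ \mathfrak{l}}_1},\cdots, { \mathfrak{m}}_{{ \mathfrak{l}}_N})$ are jointly compatible, then so is $({ \mathfrak{m}}_{{ \mathfrak{l}}_1},{ \mathfrak{m}}_{{ \mathfrak{l}}_1},\cdots, { \mathfrak{m}}_{{ \mathfrak{l}}_N})$, i.e. we may duplicate the first measurement. The natural strategy is to exhibit the required joint morphism $\Psi_{{}_{({ \mathfrak{m}}_{{ \mathfrak{l}}_1},{ \mathfrak{m}}_{{ \mathfrak{l}}_1},\cdots,{ \mathfrak{m}}_{{ \mathfrak{l}}_N})}}$ explicitly from the one we are given, $\Psi:=\Psi_{{}_{({ \mathfrak{m}}_{{ \mathfrak{l}}_1},\cdots,{ \mathfrak{m}}_{{ \mathfrak{l}}_N})}}: { \mathfrak{S}}\to {\mathfrak{B}}^{\widetilde{\otimes}N}$. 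The obvious candidate is obtained by post-composing $\Psi$ with the ``diagonal on the first factor'' map $\Delta:{\mathfrak{B}}^{\widetilde{\otimes}N}\to {\mathfrak{B}}^{\widetilde{\otimes}(N+1)}$ sending $b_1\widetilde{\otimes}b_2\widetilde{\otimes}\cdots\widetilde{\otimes}b_N \mapsto b_1\widetilde{\otimes}b_1\widetilde{\otimes}b_2\widetilde{\otimes}\cdots\widetilde{\otimes}b_N$ on pure tensors, extended to all of ${\mathfrak{B}}^{\widetilde{\otimes}N}$ by Theorem \ref{theorempuretilde} and Theorem \ref{theoremA4Ptilde} (every element is an infimum of pure tensors, and $\Delta$ must be an Inf-homomorphism). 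The plan is then to set $\Psi_{{}_{({ \mathfrak{m}}_{{ \mathfrak{l}}_1},{ \mathfrak{m}}_{{ \mathfrak{l}}_1},\cdots,{ \mathfrak{m}}_{{ \mathfrak{l}}_N})}}:= \Delta\circ\Psi$ and verify the three defining conditions (\ref{relationsjointchannelmorphism}), (\ref{relationsjointchannelreal}), (\ref{relationsjointchannel}).

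\textbf{Key steps.} First, I would construct $\Delta$ carefully. Since ${\mathfrak{B}}^{\widetilde{\otimes}N}$ is a simplex (Lemma \ref{SASBditribandcup}), its pure states are exactly the pure tensors $b_1\widetilde{\otimes}\cdots\widetilde{\otimes}b_N$ with each $b_i\in{\mathfrak{B}}^{{}^{pure}}=\{\textit{\bf Y},\textit{\bf N}\}$, and every element decomposes uniquely as an infimum of such (Lemma \ref{lemmaUsimplex}), so defining $\Delta$ on pure tensors and extending by $\Delta(\bigsqcap_i \sigma_i) = \bigsqcap_i \Delta(\sigma_i)$ gives a well-defined Inf-homomorphism; by the theorem following Definition \ref{definitionstarstructure} (or rather by the earlier theorems on morphisms), an Inf-homomorphism between states/effects Chu spaces automatically lifts to a Chu morphism $(\Delta,\Delta^\ast)$, and moreover $\Delta$ is easily seen to be a \emph{real} morphism because ${\mathfrak{B}}^{\widetilde{\otimes}N}$ and ${\mathfrak{B}}^{\widetilde{\otimes}(N+1)}$ are simplices, so all effects are real and condition (\ref{defrealmorphism}) is automatic. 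Second, since $\Psi$ is a real morphism and $\Delta$ is a real morphism, the composite $\Delta\circ\Psi$ is a real morphism (composition of real morphisms is real, as $f^\ast\circ g^\ast$ maps real effects to real effects when each factor does); this gives (\ref{relationsjointchannelmorphism}) and (\ref{relationsjointchannelreal}). Third, for (\ref{relationsjointchannel}) I would compute the partial traces: $\zeta^{{}^{{\mathfrak{B}}\cdots{\mathfrak{B}}}}_{(1)}\circ\Delta = \zeta^{{}^{{\mathfrak{B}}\cdots{\mathfrak{B}}}}_{(1)}$ on ${\mathfrak{B}}^{\widetilde{\otimes}N}$, $\zeta^{{}^{{\mathfrak{B}}\cdots{\mathfrak{B}}}}_{(2)}\circ\Delta = \zeta^{{}^{{\mathfrak{B}}\cdots{\mathfrak{B}}}}_{(1)}$ on ${\mathfrak{B}}^{\widetilde{\otimes}N}$, and $\zeta^{{}^{{\mathfrak{B}}\cdots{\mathfrak{B}}}}_{(i+1)}\circ\Delta = \zeta^{{}^{{\mathfrak{B}}\cdots{\mathfrak{B}}}}_{(i)}$ for $i=2,\dots,N$ — these identities are checked on pure tensors (where they are trivial bookkeeping) and then extended using that all maps involved are Inf-homomorphisms. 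Combining with $\zeta^{{}^{{\mathfrak{B}}\cdots{\mathfrak{B}}}}_{(i)}\circ\Psi = { \mathfrak{m}}_{{ \mathfrak{l}}_i}$ yields exactly that the $j$-th component of $\Delta\circ\Psi$ is ${ \mathfrak{m}}_{{ \mathfrak{l}}_1}$ for $j=1,2$ and ${ \mathfrak{m}}_{{ \mathfrak{l}}_{j-1}}$ for $j\geq 3$, which is what (\ref{relationsjointchannel}) demands for the tuple $({ \mathfrak{m}}_{{ \mathfrak{l}}_1},{ \mathfrak{m}}_{{ \mathfrak{l}}_1},\cdots,{ \mathfrak{m}}_{{ \mathfrak{l}}_N})$.

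\textbf{Expected main obstacle.} The only genuinely delicate point is making the construction of $\Delta$ rigorous: one must be sure that the ``diagonal on the first factor'' prescription is consistent with the defining relations of the tensor product ${\mathfrak{B}}^{\widetilde{\otimes}N}$ (i.e. bimorphic identities in each slot), so that $\Delta$ is genuinely well-defined on all of ${\mathfrak{B}}^{\widetilde{\otimes}N}$ and not merely on pure tensors. Here I would invoke the universal / congruence description of the minimal tensor product and the simplex structure (Theorem \ref{SASBditribandcup}, Lemma \ref{lemmaUsimplex}), noting that on a simplex the representation of an element by its downset of pure states is unique, so $\Delta$ is forced and well-defined; the fact that it respects infima then follows immediately, and the lift to a Chu morphism and the preservation-of-reality are routine given the earlier theorems. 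Everything else is a short diagram chase on pure tensors.
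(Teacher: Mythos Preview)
Your proposal is correct and follows essentially the same idea as the paper: post-compose the given joint morphism $\Psi$ with a ``duplicate the first factor'' map into ${\mathfrak{B}}^{\widetilde{\otimes}(N+1)}$. The only cosmetic difference is that the paper builds your $\Delta$ more modularly, by first defining the diagonal $\varphi:{\mathfrak{B}}\to{\mathfrak{B}}^{\widetilde{\otimes}2}$, $\varphi(\textit{\bf Y})=\textit{\bf Y}\widetilde{\otimes}\textit{\bf Y}$, $\varphi(\textit{\bf N})=\textit{\bf N}\widetilde{\otimes}\textit{\bf N}$, and then setting $\Delta=\varphi\widetilde{\otimes}id^{\widetilde{\otimes}(N-1)}$ via the tensor-of-morphisms construction for simplex spaces (subsection \ref{subsectionsymmetriesbipartitedeterministic}); this sidesteps your well-definedness discussion entirely, since $\varphi$ lives on the three-element set ${\mathfrak{B}}$ and the tensor functor does the rest.
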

\begin{proof}
Let us consider that the $N-$uple of real measurement maps $({ \mathfrak{m}}_{{ \mathfrak{l}}_1},\cdots, { \mathfrak{m}}_{{ \mathfrak{l}}_N})$ are jointly compatible. There exists a joint morphism $\Psi_{{}_{({ \mathfrak{m}}_{{ \mathfrak{l}}_1},\cdots, { \mathfrak{m}}_{{ \mathfrak{l}}_N})}}$ satisfying the required conditions mentioned in Definition \ref{definitionjointmorphism}.  Using the map $\varphi$ defined as a morphism from ${\mathfrak{B}}$ to ${\mathfrak{B}}^{\widetilde{\otimes}2}$ by 
\begin{eqnarray}
\varphi (\textit{\bf Y}):=\textit{\bf Y}\widetilde{\otimes}\textit{\bf Y} && \varphi (\textit{\bf N}):=\textit{\bf N}\widetilde{\otimes}\textit{\bf N}
\end{eqnarray}
we then build a map denoted $\Phi$ from ${ \mathfrak{S}}$ to ${\mathfrak{B}}^{\widetilde{\otimes}(N+1)}$ by
\begin{eqnarray}
\Phi &:= & (\varphi\widetilde{\otimes}id^{\widetilde{\otimes}(N-1)}) \circ \Psi_{{}_{({ \mathfrak{m}}_{{ \mathfrak{l}}_1},\cdots, { \mathfrak{m}}_{{ \mathfrak{l}}_N})}}
\end{eqnarray}
It is easy to check that $\Phi$ is a joint morphism for the $(N+1)-$uple of measurement maps $({ \mathfrak{m}}_{{ \mathfrak{l}}_1},{ \mathfrak{m}}_{{ \mathfrak{l}}_1},\cdots, { \mathfrak{m}}_{{ \mathfrak{l}}_N})$.  In other words,
\begin{eqnarray}
\Phi &=& \Psi_{{}_{({ \mathfrak{m}}_{{ \mathfrak{l}}_1},{ \mathfrak{m}}_{{ \mathfrak{l}}_1},\cdots, { \mathfrak{m}}_{{ \mathfrak{l}}_N})}}.
\end{eqnarray}
\end{proof}

\begin{lemma}\label{genmeasurementconjugatecompatible} 
If the $N-$uple of real measurement maps $({ \mathfrak{m}}_{{ \mathfrak{l}}_1},\cdots, { \mathfrak{m}}_{{ \mathfrak{l}}_{N}})$ are jointly compatible, then the $(N+1)-$uple of real measurement maps $({ \mathfrak{m}}_{\overline{{ \mathfrak{l}}_{1}}},{ \mathfrak{m}}_{{ \mathfrak{l}}_{1}},\cdots, { \mathfrak{m}}_{{ \mathfrak{l}}_{N}})$ are also jointly-compatible. 
\end{lemma}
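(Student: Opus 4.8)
The plan is to mimic the construction used in Lemma \ref{genmeasurementselfcompatible}, replacing the duplication morphism $\varphi$ by a morphism that first duplicates and then applies the star-involution $\overline{\;\cdot\;}$ of ${\mathfrak{B}}$ in one of the two copies. Concretely, suppose $({ \mathfrak{m}}_{{ \mathfrak{l}}_1},\cdots,{ \mathfrak{m}}_{{ \mathfrak{l}}_N})$ is jointly compatible with joint morphism $\Psi_{{}_{({ \mathfrak{m}}_{{ \mathfrak{l}}_1},\cdots,{ \mathfrak{m}}_{{ \mathfrak{l}}_N})}}$ from ${ \mathfrak{S}}$ to ${\mathfrak{B}}^{\widetilde{\otimes}N}$. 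First I would define a morphism $\overline{\varphi}$ from ${\mathfrak{B}}$ to ${\mathfrak{B}}^{\widetilde{\otimes}2}$ by setting $\overline{\varphi}(\textit{\bf Y}):=\textit{\bf N}\widetilde{\otimes}\textit{\bf Y}$ and $\overline{\varphi}(\textit{\bf N}):=\textit{\bf Y}\widetilde{\otimes}\textit{\bf N}$ (and $\overline{\varphi}(\bot):=\bot\widetilde{\otimes}\bot$, forced by the homomorphic property, using that ${\mathfrak{B}}$ is a simplex so that such a map extends uniquely). One checks that $\overline{\varphi}$ preserves infima and that $\overline{\varphi}=(\;\overline{\;\cdot\;}\widetilde{\otimes}id)\circ\varphi$, where $\varphi$ is the duplication morphism of Lemma \ref{genmeasurementselfcompatible} and $\overline{\;\cdot\;}$ denotes the star-involution of ${\mathfrak{B}}$, itself a morphism from ${\mathfrak{B}}$ to ${\mathfrak{B}}$.

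Next I would set
\begin{eqnarray}
\Phi &:=& (\overline{\varphi}\widetilde{\otimes}id^{\widetilde{\otimes}(N-1)}) \circ \Psi_{{}_{({ \mathfrak{m}}_{{ \mathfrak{l}}_1},\cdots,{ \mathfrak{m}}_{{ \mathfrak{l}}_N})}},
\end{eqnarray}
a map from ${ \mathfrak{S}}$ to ${\mathfrak{B}}^{\widetilde{\otimes}(N+1)}$. Being a composition of morphisms (tensor products of morphisms of States/Effects Chu spaces are morphisms, and $\Psi$ is a morphism), $\Phi$ is itself a morphism, so (\ref{relationsjointchannelmorphism}) holds automatically. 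For (\ref{relationsjointchannelreal}) one argues that $\Phi^\ast$ is the composite of the right-components, each of which preserves the real sub-effect-spaces: $\Psi^\ast(\overline{ \mathfrak{E}}_{{\mathfrak{B}}^{\widetilde{\otimes}N}})\subseteq\overline{ \mathfrak{E}}_{ \mathfrak{S}}$ by hypothesis, and $(\overline{\varphi}\widetilde{\otimes}id^{\widetilde{\otimes}(N-1)})^\ast$ sends real effects to real effects because $\overline{\varphi}$, built between simplices, is a real morphism (all effects of a simplex are real, by Theorem \ref{theoremcompatiblemeasurements}), and the tensor product of real morphisms is real. Finally, for (\ref{relationsjointchannel}) I would compute the partial traces: $\zeta^{{}^{{\mathfrak{B}}\cdots{\mathfrak{B}}}}_{(1)}\circ\Phi = \overline{\;\cdot\;}\circ\zeta^{{}^{{\mathfrak{B}}\cdots{\mathfrak{B}}}}_{(1)}\circ\Psi = \overline{\;\cdot\;}\circ{ \mathfrak{m}}_{{ \mathfrak{l}}_1}={ \mathfrak{m}}_{\overline{{ \mathfrak{l}}_1}}$ (using (\ref{etbar}), i.e. $\epsilon^{ \mathfrak{S}}_{\overline{\, \mathfrak{l}\,}}(\sigma)=\overline{\epsilon^{ \mathfrak{S}}_{ { \mathfrak{l}}}(\sigma)}$, which gives ${ \mathfrak{m}}_{\overline{{ \mathfrak{l}}}}=\overline{\;\cdot\;}\circ{ \mathfrak{m}}_{{ \mathfrak{l}}}$); $\zeta^{{}^{{\mathfrak{B}}\cdots{\mathfrak{B}}}}_{(2)}\circ\Phi={ \mathfrak{m}}_{{ \mathfrak{l}}_1}$; and $\zeta^{{}^{{\mathfrak{B}}\cdots{\mathfrak{B}}}}_{(i+1)}\circ\Phi={ \mathfrak{m}}_{{ \mathfrak{l}}_i}$ for $i=2,\cdots,N$. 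Hence $\Phi=\Psi_{{}_{({ \mathfrak{m}}_{\overline{{ \mathfrak{l}}_1}},{ \mathfrak{m}}_{{ \mathfrak{l}}_1},\cdots,{ \mathfrak{m}}_{{ \mathfrak{l}}_N})}}$, proving joint compatibility.

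The routine part is verifying that $\overline{\varphi}$ is a well-defined morphism and that the partial-trace identities hold on generators; these follow the same pattern as Lemma \ref{genmeasurementselfcompatible}. The main obstacle I anticipate is the bookkeeping around condition (\ref{relationsjointchannelreal}): one must be careful that the auxiliary morphism inserted in the first tensor slot is genuinely a \emph{real} morphism so that its dual preserves the reduced effect spaces, and that "real morphism" is closed under $\widetilde{\otimes}$ with $id$. Since ${\mathfrak{B}}$ and its tensor powers are simplices, every effect there is real, so $\overline{\varphi}$ is vacuously real, and the tensor-with-identity stability is immediate from the definition (\ref{defrealmorphism}); once this is spelled out, the proof closes.
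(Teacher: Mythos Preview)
Your proposal is correct and takes essentially the same approach as the paper: the paper first invokes Lemma~\ref{genmeasurementselfcompatible} to obtain the joint morphism $\Psi_{{}_{({ \mathfrak{m}}_{{ \mathfrak{l}}_1},{ \mathfrak{m}}_{{ \mathfrak{l}}_1},\cdots,{ \mathfrak{m}}_{{ \mathfrak{l}}_N})}}$ and then post-composes with $(\omega\widetilde{\otimes}id^{\widetilde{\otimes}N})$ where $\omega(\alpha)=\overline{\alpha}$, whereas you collapse these two steps into the single morphism $\overline{\varphi}=(\omega\widetilde{\otimes}id)\circ\varphi$ applied to the first slot of the original $N$-fold joint morphism; the resulting $\Phi$ is literally the same map. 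Your treatment of condition~(\ref{relationsjointchannelreal}) is in fact more explicit than the paper's, which simply asserts that $\Phi$ is a joint morphism without spelling out the real-morphism verification.
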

\begin{proof}
We first note that
\begin{eqnarray}
&&{ \mathfrak{m}}_{\overline{\, \mathfrak{l}\,}} = \omega \circ { \mathfrak{m}}_{{ \mathfrak{l}}}\\
&\textit{\rm with}& \forall \alpha\in { \mathfrak{B}},\;\omega(\alpha):=\overline{\,\alpha\,}.
\end{eqnarray}
On another part, using the Lemma \ref{genmeasurementselfcompatible}, we know that there exists a joint morphism for for the $(N+1)-$uple of measurement maps $({ \mathfrak{m}}_{{ \mathfrak{l}}_1},{ \mathfrak{m}}_{{ \mathfrak{l}}_1},\cdots, { \mathfrak{m}}_{{ \mathfrak{l}}_N})$, it is denoted $\Psi_{{}_{({ \mathfrak{m}}_{{ \mathfrak{l}}_1},{ \mathfrak{m}}_{{ \mathfrak{l}}_1},\cdots, { \mathfrak{m}}_{{ \mathfrak{l}}_N})}}$. \\
We then  define the map denoted $\Phi'$ from ${ \mathfrak{S}}$ to ${\mathfrak{B}}^{\widetilde{\otimes}(N+1)}$ by
\begin{eqnarray}
\Phi &:= & (\omega \widetilde{\otimes}id^{\widetilde{\otimes}N}) \circ \Psi_{{}_{({ \mathfrak{m}}_{{ \mathfrak{l}}_1},{ \mathfrak{m}}_{{ \mathfrak{l}}_1},\cdots, { \mathfrak{m}}_{{ \mathfrak{l}}_N})}}.
\end{eqnarray}
We check easily that $\Phi$ is a joint morphism for the $(N+1)-$uple of real measurement maps $({ \mathfrak{m}}_{\overline{{ \mathfrak{l}}_{1}}},{ \mathfrak{m}}_{{ \mathfrak{l}}_{1}},\cdots, { \mathfrak{m}}_{{ \mathfrak{l}}_{N}})$. In other words,
\begin{eqnarray}
\Phi' &=& \Psi_{{}_{({ \mathfrak{m}}_{\overline{{ \mathfrak{l}}_1}},{ \mathfrak{m}}_{{ \mathfrak{l}}_1},\cdots, { \mathfrak{m}}_{{ \mathfrak{l}}_N})}}.
\end{eqnarray}
\end{proof}

\begin{lemma}\label{genmeasurementconstantcompatible} 
For any $\eta\in { \mathfrak{B}}$, we define the morphism $\tau_{\eta}$ as the constant map from ${ \mathfrak{S}}$ to ${ \mathfrak{B}}$ sending any state to $\eta$.\\
If the $N-$uple of real measurement maps $({ \mathfrak{m}}_{{ \mathfrak{l}}_{1}},\cdots, { \mathfrak{m}}_{{ \mathfrak{l}}_{N}})$ are jointly compatible, then the $(N+1)-$uple of real measurement maps $(\tau_{\eta},{ \mathfrak{m}}_{{ \mathfrak{l}}_{1}},\cdots, { \mathfrak{m}}_{{ \mathfrak{l}}_{N}})$ are jointly compatible.
\end{lemma}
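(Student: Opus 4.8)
\textbf{Proof plan for Lemma \ref{genmeasurementconstantcompatible}.}

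The plan is to reduce this to Lemma \ref{genmeasurementselfcompatible} by exhibiting an explicit joint morphism. Since the $N$-uple $({ \mathfrak{m}}_{{ \mathfrak{l}}_{1}},\cdots, { \mathfrak{m}}_{{ \mathfrak{l}}_{N}})$ is jointly compatible, we have a joint morphism $\Psi_{{}_{({ \mathfrak{m}}_{{ \mathfrak{l}}_1},\cdots, { \mathfrak{m}}_{{ \mathfrak{l}}_N})}} : { \mathfrak{S}}\rightarrow {\mathfrak{B}}^{\widetilde{\otimes}N}$ satisfying (\ref{relationsjointchannelmorphism}), (\ref{relationsjointchannelreal}) and (\ref{relationsjointchannel}). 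The idea is to prepend the $\eta$-component by tensoring with a suitable constant-valued morphism ${\mathfrak{B}}\rightarrow{\mathfrak{B}}^{\widetilde{\otimes}2}$; since ${ \mathfrak{B}}$ and all its tensor powers are simplices (Lemma \ref{SASBditribandcup}), the partial traces behave as expected and the duality bookkeeping is routine.

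First I would distinguish the three cases $\eta=\textit{\bf Y}$, $\eta=\textit{\bf N}$, $\eta=\bot$. For $\eta\in\{\textit{\bf Y},\textit{\bf N}\}$, define a morphism $\theta_\eta:{ \mathfrak{B}}\rightarrow {\mathfrak{B}}^{\widetilde{\otimes}2}$ by $\theta_\eta(\textit{\bf Y}):=\eta\widetilde{\otimes}\textit{\bf Y}$, $\theta_\eta(\textit{\bf N}):=\eta\widetilde{\otimes}\textit{\bf N}$, and $\theta_\eta(\bot):=\bot\widetilde{\otimes}\bot$ — one checks this preserves infima and that, under the identification ${\mathfrak{B}}^{\widetilde{\otimes}2}\cong Max({\mathfrak{B}})\times Max({\mathfrak{B}})$ via (\ref{starsimplex}), it extends to a well-defined homomorphism on the simplex ${ \mathfrak{B}}$; the associated right component carries real effects to real effects since every effect of a simplex is real (Theorem \ref{theoremcompatiblemeasurements}). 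Then set
\begin{eqnarray}
\Phi &:=& (\theta_\eta\widetilde{\otimes}id^{\widetilde{\otimes}(N-1)})\circ \Psi_{{}_{({ \mathfrak{m}}_{{ \mathfrak{l}}_1},\cdots, { \mathfrak{m}}_{{ \mathfrak{l}}_N})}} : { \mathfrak{S}}\longrightarrow {\mathfrak{B}}^{\widetilde{\otimes}(N+1)}.
\end{eqnarray}
Property (\ref{relationsjointchannelmorphism}) for $\Phi$ follows because it is a composition of morphisms (using the functoriality of $\widetilde{\otimes}$ from subsection \ref{subsectionsymmetriesbipartitedeterministic}); property (\ref{relationsjointchannelreal}) follows from the corresponding property of $\Psi$ together with the fact that $\theta_\eta^\ast$ preserves real effects; and the components are computed via $\zeta^{{}^{{\mathfrak{B}}\cdots{\mathfrak{B}}}}_{(1)}\circ\Phi = \tau_\eta$ (because the first ${ \mathfrak{B}}$-slot of $\theta_\eta$ is the constant $\eta$), $\zeta^{{}^{{\mathfrak{B}}\cdots{\mathfrak{B}}}}_{(2)}\circ\Phi = { \mathfrak{m}}_{{ \mathfrak{l}}_1}$ and $\zeta^{{}^{{\mathfrak{B}}\cdots{\mathfrak{B}}}}_{(i+1)}\circ\Phi = { \mathfrak{m}}_{{ \mathfrak{l}}_i}$ for $i\geq 2$, using (\ref{relationsjointchannel}) for $\Psi$. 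The case $\eta=\bot$ is even simpler: take $\theta_\bot(x):=\bot\widetilde{\otimes}x$, which is visibly a homomorphism, and argue identically.

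The main obstacle I anticipate is the verification that $\theta_\eta$ (for $\eta\in\{\textit{\bf Y},\textit{\bf N}\}$) is a genuine morphism of states/effects Chu spaces rather than merely an infimum-preserving map — one must confirm it respects the star involution of the simplex and hence induces a right component preserving real effects. This is handled by appealing to the explicit description of $\star$ on simplices in (\ref{starsimplex}) and to Theorem \ref{theoremcompatiblemeasurements}, which guarantees $\overline{ \mathfrak{E}}_{{\mathfrak{B}}^{\widetilde{\otimes}2}}={ \mathfrak{E}}_{{\mathfrak{B}}^{\widetilde{\otimes}2}}$, so that the realness condition (\ref{relationsjointchannelreal}) is automatic on the ${ \mathfrak{B}}$-side and the only substantive check is that $\Psi^\ast$ already preserves realness, which it does by hypothesis. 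Everything else is the same mechanical tensoring-with-identity argument already used in Lemmas \ref{genmeasurementselfcompatible} and \ref{genmeasurementconjugatecompatible}.
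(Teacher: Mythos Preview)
Your approach is essentially the paper's own, which reduces to Lemma~\ref{genmeasurementselfcompatible} via the one-line observation $\tau_\eta = \tau_\eta \circ {\mathfrak{m}}_{\mathfrak{l}}$ (the right-hand $\tau_\eta$ being the constant self-map ${\mathfrak{B}}\to{\mathfrak{B}}$); your $\theta_\eta$ is exactly $(\tau_\eta\widetilde{\otimes}id)\circ\varphi$ with $\varphi$ the diagonal from that lemma, so the two constructions coincide.

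One slip to fix: for $\eta\in\{\textit{\bf Y},\textit{\bf N}\}$ your value $\theta_\eta(\bot)=\bot\widetilde{\otimes}\bot$ contradicts the infimum-preservation you claim, since $\theta_\eta(\textit{\bf Y})\sqcap_{{}_{{\mathfrak{B}}^{\widetilde{\otimes}2}}}\theta_\eta(\textit{\bf N})=(\eta\widetilde{\otimes}\textit{\bf Y})\sqcap_{{}_{{\mathfrak{B}}^{\widetilde{\otimes}2}}}(\eta\widetilde{\otimes}\textit{\bf N})=\eta\widetilde{\otimes}\bot\neq\bot\widetilde{\otimes}\bot$. With the corrected value $\theta_\eta(\bot)=\eta\widetilde{\otimes}\bot$ (equivalently, just take $\theta_\eta(x)=\eta\widetilde{\otimes}x$ for all $x$, which is a homomorphism by the bimorphism property (\ref{pitensor=tensorpi2preminimal})), the first partial trace is indeed the constant $\eta$ and everything goes through. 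Your worry about the realness of $\theta_\eta^\ast$ is unnecessary: any infimum-preserving map between spaces of states is automatically the left component of a Chu morphism (Theorem following Lemma~\ref{aepsilonsigma}), and since ${\mathfrak{B}}^{\widetilde{\otimes}2}$ is a simplex all its effects are real, so (\ref{relationsjointchannelreal}) reduces to the hypothesis on $\Psi$.
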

\begin{proof}
Direct consequence of Lemma \ref{genmeasurementselfcompatible} using the property 
\begin{eqnarray}
\tau_{\eta} &=& \tau_{\eta} \circ { \mathfrak{m}}_{{ \mathfrak{l}}}.
\end{eqnarray}
\end{proof}

\begin{lemma}\label{genmeasurementwedgecompatible} For any $\eta\in { \mathfrak{B}}$, if the $N-$uple of real measurement maps $({ \mathfrak{m}}_{{ \mathfrak{l}}_{1}},\cdots, { \mathfrak{m}}_{{ \mathfrak{l}}_{N}})$ are jointly compatible ($N\geq 2$), then the $(N+1)-$ uple of real measurement maps $({ \mathfrak{m}}_{({ \mathfrak{l}}_{1} 
\sqcap_{{}_{{ \mathfrak{E}}_{ \mathfrak{S}}}}
{ \mathfrak{l}}_{2})}$,${ \mathfrak{m}}_{{ \mathfrak{l}}_{1}},\cdots, { \mathfrak{m}}_{{ \mathfrak{l}}_{N}})$ are jointly compatible.
\end{lemma}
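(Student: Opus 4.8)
The plan is to reduce the claim to the joint compatibility of the enlarged $N$-uple $({\mathfrak{m}}_{{ \mathfrak{l}}_1},{ \mathfrak{m}}_{{ \mathfrak{l}}_1},{ \mathfrak{m}}_{{ \mathfrak{l}}_2},\cdots,{ \mathfrak{m}}_{{ \mathfrak{l}}_N})$ already furnished by Lemma \ref{genmeasurementselfcompatible}, and then to post-compose with a fixed morphism ${ \mathfrak{B}}^{\widetilde{\otimes}2}\to{ \mathfrak{B}}$ that implements the infimum in ${ \mathfrak{E}}_{ \mathfrak{S}}$ on the level of evaluations. Concretely, first I would invoke Lemma \ref{genmeasurementselfcompatible} to obtain a joint morphism $\Psi:=\Psi_{{}_{({ \mathfrak{m}}_{{ \mathfrak{l}}_1},{ \mathfrak{m}}_{{ \mathfrak{l}}_1},{ \mathfrak{m}}_{{ \mathfrak{l}}_2},\cdots, { \mathfrak{m}}_{{ \mathfrak{l}}_N})}}$ from ${ \mathfrak{S}}$ to ${\mathfrak{B}}^{\widetilde{\otimes}(N+1)}$, satisfying the conditions (\ref{relationsjointchannelmorphism}), (\ref{relationsjointchannelreal}), (\ref{relationsjointchannel}) of Definition \ref{definitionjointmorphism}.

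Second, I would introduce the map $\mu:{ \mathfrak{B}}\widetilde{\otimes}{ \mathfrak{B}}\longrightarrow { \mathfrak{B}}$ defined on the pure states of ${ \mathfrak{B}}\widetilde{\otimes}{ \mathfrak{B}}$ by $\mu(u\widetilde{\otimes}v):=u\wedge v$ for $u,v\in\{\textit{\bf Y},\textit{\bf N}\}$ and extended to all of ${ \mathfrak{B}}\widetilde{\otimes}{ \mathfrak{B}}$ by the generating property (this is legitimate since ${ \mathfrak{B}}\widetilde{\otimes}{ \mathfrak{B}}$ is a simplex, cf. Lemma \ref{SASBditribandcup} and formula (\ref{simplexdecompunique})). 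One then checks that $\mu$ is a homomorphism of Inf semi-lattices and that its dual $\mu^\ast$ maps $\overline{ \mathfrak{E}}_{ \mathfrak{B}}$ into $\overline{ \mathfrak{E}}_{{{\mathfrak{B}}\widetilde{\otimes}{\mathfrak{B}}}}$; since ${ \mathfrak{B}}\widetilde{\otimes}{ \mathfrak{B}}$ is a simplex all of its effects are real, so this reality condition is automatic. Then I set
\begin{eqnarray}
\Phi &:=& (\mu \widetilde{\otimes} id^{\widetilde{\otimes}(N-1)}) \circ \Psi,
\end{eqnarray}
a morphism from ${ \mathfrak{S}}$ to ${\mathfrak{B}}^{\widetilde{\otimes}N}$.

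Third, I would verify the three defining conditions of Definition \ref{definitionjointmorphism} for $\Phi$ relative to the $N$-uple $({ \mathfrak{m}}_{({ \mathfrak{l}}_1\sqcap_{{}_{{ \mathfrak{E}}_{ \mathfrak{S}}}}{ \mathfrak{l}}_2)},{ \mathfrak{m}}_{{ \mathfrak{l}}_1},\cdots,{ \mathfrak{m}}_{{ \mathfrak{l}}_N})$. The morphism property (\ref{relationsjointchannelmorphism}) is immediate since $\Phi$ is a composition of morphisms. The reality condition (\ref{relationsjointchannelreal}) follows from the reality of $\Psi^\ast$ together with the observation that the dual of $\mu\widetilde{\otimes}id^{\widetilde{\otimes}(N-1)}$ sends $\overline{ \mathfrak{E}}_{{{\mathfrak{B}}^{\widetilde{\otimes}N}}}$ into $\overline{ \mathfrak{E}}_{{{\mathfrak{B}}^{\widetilde{\otimes}(N+1)}}}$. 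For the partial-trace condition (\ref{relationsjointchannel}), the crucial point is $\zeta^{{}^{{\mathfrak{B}}\cdots{\mathfrak{B}}}}_{(1)}\circ\Phi = { \mathfrak{m}}_{({ \mathfrak{l}}_1\sqcap_{{}_{{ \mathfrak{E}}_{ \mathfrak{S}}}}{ \mathfrak{l}}_2)}$; using (\ref{axiomEinfsemilattice}) and (\ref{measurementeffect}) one has ${ \mathfrak{m}}_{({ \mathfrak{l}}_1\sqcap{ \mathfrak{l}}_2)}(\sigma)={ \mathfrak{m}}_{{ \mathfrak{l}}_1}(\sigma)\wedge{ \mathfrak{m}}_{{ \mathfrak{l}}_2}(\sigma)$, which is exactly what $\mu$ computes from the first two components of $\Psi(\sigma)$; the remaining $N-1$ partial traces $\zeta_{(i)}$, $i\geq 2$, of $\Phi$ reproduce ${ \mathfrak{m}}_{{ \mathfrak{l}}_1},\cdots,{ \mathfrak{m}}_{{ \mathfrak{l}}_N}$ unchanged because $\mu\widetilde{\otimes}id^{\widetilde{\otimes}(N-1)}$ acts as identity on those factors. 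Checking this equality of morphisms is done, as in Theorem \ref{theoremcompatiblemeasurements}, by verifying it on pure states and extending by the generating property and homomorphicity.

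\textbf{Main obstacle.} The only genuinely delicate point is that $\mu$ (and hence $\mu\widetilde{\otimes}id^{\widetilde{\otimes}(N-1)}$) is a \emph{well-defined} Inf semi-lattice homomorphism: one must confirm that the prescription $u\widetilde{\otimes}v\mapsto u\wedge v$ on pure tensors is consistent with the defining relations of $\widetilde{\otimes}$, i.e. that it factors through the congruence $\approx$, and that its dual respects the star involution on effects so that condition (\ref{relationsjointchannelreal}) genuinely holds; here one leans on the simplex structure of ${ \mathfrak{B}}^{\widetilde{\otimes}k}$ (Lemma \ref{SASBditribandcup}) and on the universal/bimorphic property of $\widetilde{\otimes}$ recorded in the subsection on the minimal tensor product. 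Everything else is a routine case check mirroring the proof of Theorem \ref{theoremcompatiblemeasurements}.
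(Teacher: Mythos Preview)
There is a genuine gap. Your construction never touches ${ \mathfrak{l}}_{2}$ in the slot where the infimum is supposed to appear. Lemma~\ref{genmeasurementselfcompatible} duplicates the \emph{first} measurement, so the joint morphism $\Psi$ you obtain has $\zeta_{(1)}\circ\Psi=\zeta_{(2)}\circ\Psi={ \mathfrak{m}}_{{ \mathfrak{l}}_1}$. Applying your $\mu$ to these two slots therefore computes ${ \mathfrak{m}}_{{ \mathfrak{l}}_1}(\sigma)\wedge{ \mathfrak{m}}_{{ \mathfrak{l}}_1}(\sigma)={ \mathfrak{m}}_{{ \mathfrak{l}}_1}(\sigma)$, not ${ \mathfrak{m}}_{{ \mathfrak{l}}_1}(\sigma)\wedge{ \mathfrak{m}}_{{ \mathfrak{l}}_2}(\sigma)$. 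In fact, since the proof of Lemma~\ref{genmeasurementselfcompatible} builds $\Psi=(\varphi\widetilde{\otimes}id^{\widetilde{\otimes}(N-1)})\circ\Psi_{({ \mathfrak{m}}_{{ \mathfrak{l}}_1},\ldots,{ \mathfrak{m}}_{{ \mathfrak{l}}_N})}$ with $\varphi(\textit{\bf Y})=\textit{\bf Y}\widetilde{\otimes}\textit{\bf Y}$, $\varphi(\textit{\bf N})=\textit{\bf N}\widetilde{\otimes}\textit{\bf N}$, and your $\mu$ satisfies $\mu\circ\varphi=id_{{ \mathfrak{B}}}$, the composite $\Phi=(\mu\widetilde{\otimes}id^{\widetilde{\otimes}(N-1)})\circ\Psi$ is literally the original $N$-tuple joint morphism you started with. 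Nothing has been gained. This is also why your bookkeeping breaks: $\Phi$ lands in ${\mathfrak{B}}^{\widetilde{\otimes}N}$, yet the target $(N+1)$-uple requires a joint morphism into ${\mathfrak{B}}^{\widetilde{\otimes}(N+1)}$, and your claim that the ``remaining $N-1$ partial traces'' reproduce the $N$ measurements ${ \mathfrak{m}}_{{ \mathfrak{l}}_1},\ldots,{ \mathfrak{m}}_{{ \mathfrak{l}}_N}$ is off by one.

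The paper repairs exactly this by going the other direction: rather than enlarging and then collapsing, it post-composes the original joint morphism $\Psi_{({ \mathfrak{m}}_{{ \mathfrak{l}}_1},{ \mathfrak{m}}_{{ \mathfrak{l}}_2})}$ with a morphism $\Omega:{ \mathfrak{B}}^{\widetilde{\otimes}2}\to{ \mathfrak{B}}^{\widetilde{\otimes}3}$ sending $\alpha\widetilde{\otimes}\beta$ to $(\alpha\widetilde{\otimes}\alpha\widetilde{\otimes}\beta)\sqcap(\beta\widetilde{\otimes}\alpha\widetilde{\otimes}\beta)$, so that $\zeta_{(1)}\circ\Omega$ computes $\alpha\wedge\beta$ while $\zeta_{(2)},\zeta_{(3)}$ preserve $\alpha,\beta$. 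Your idea of a meet map $\mu$ is morally the same device, but it must act on slots carrying ${ \mathfrak{m}}_{{ \mathfrak{l}}_1}$ and ${ \mathfrak{m}}_{{ \mathfrak{l}}_2}$, and the original slots must be kept alongside --- which is precisely what $\Omega$ arranges. If you want to salvage your route, you would need to duplicate \emph{both} ${ \mathfrak{l}}_1$ and ${ \mathfrak{l}}_2$ (producing an $(N+2)$-uple with first two slots ${ \mathfrak{m}}_{{ \mathfrak{l}}_1},{ \mathfrak{m}}_{{ \mathfrak{l}}_2}$ after a permutation) before applying $\mu$; but at that point it is simpler to write down $\Omega$ directly.
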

\begin{proof}
Let us fix $N=2$ for simplicity.\\
First of all, we remark that, if we denote by $\Psi_{({ \mathfrak{m}}_{{ \mathfrak{l}}_{1}},{ \mathfrak{m}}_{{ \mathfrak{l}}_{2}})}$ the joint morphism associated to the compatible real measurement maps $({ \mathfrak{m}}_{{ \mathfrak{l}}_{1}},{ \mathfrak{m}}_{{ \mathfrak{l}}_{2}})$, the property (\ref{relationsjointchannelreal}), i.e. $\Psi_{{}_{({ \mathfrak{m}}_{{ \mathfrak{l}}_{1}},{ \mathfrak{m}}_{{ \mathfrak{l}}_{2}})}}^\ast (\overline{ \mathfrak{E}}_{{{\mathfrak{B}}^{\widetilde{\otimes}2}}})\subseteq \overline{ \mathfrak{E}}_{ \mathfrak{S}}$ implies immediately 
\begin{eqnarray}
({ \mathfrak{l}}_{1} \sqcap_{{}_{{ \mathfrak{E}}_{ \mathfrak{S}}}} { \mathfrak{l}}_{2}) &\in & \overline{ \mathfrak{E}}_{ \mathfrak{S}}.\label{precisionrealmorphism}
\end{eqnarray}
Let us then consider the morphism denoted $\Omega$ and defined from ${ \mathfrak{B}}\widetilde{\otimes} { \mathfrak{B}}$ to ${ \mathfrak{B}}\widetilde{\otimes} { \mathfrak{B}}\widetilde{\otimes} { \mathfrak{B}}$ as follows. For any $\sigma \in { \mathfrak{B}}\widetilde{\otimes} { \mathfrak{B}}$, we adopt the writing of $\sigma$ in terms of pure states  
$\sigma=(\bigsqcap{}^{{}^{{ \mathfrak{B}}\widetilde{\otimes} { \mathfrak{B}}}}_{i\in I}\alpha_i\widetilde{\otimes} \beta_i)$ with $\alpha_i,\beta_i\in \{\,\textit{\bf Y}\,,\, \textit{\bf N}\,\}$, then $\Omega(\sigma)$ is defined by
\begin{eqnarray}
&& \Omega (\bigsqcap{}^{{}^{{ \mathfrak{B}}\widetilde{\otimes} { \mathfrak{B}}}}_{i\in I}\alpha_i\widetilde{\otimes} \beta_i) := \bigsqcap{}^{{}^{{ \mathfrak{B}}\widetilde{\otimes} { \mathfrak{B}}\widetilde{\otimes} { \mathfrak{B}}}}_{i\in I}(\alpha_i\widetilde{\otimes}\alpha_i\widetilde{\otimes} \beta_i \sqcap_{{}_{{ \mathfrak{B}}\widetilde{\otimes} { \mathfrak{B}}\widetilde{\otimes} { \mathfrak{B}}}} \beta_i\widetilde{\otimes}\alpha_i\widetilde{\otimes} \beta_i)
\end{eqnarray}
The map $\Omega$ is build as a morphism Moreover, we have immediately
\begin{eqnarray}
&&\zeta^{{}^{{ \mathfrak{B}} { \mathfrak{B}}{ \mathfrak{B}}}}_{(2)} \circ \Omega \circ \Psi_{({ \mathfrak{m}}_{{ \mathfrak{l}}_{1}},{ \mathfrak{m}}_{{ \mathfrak{l}}_{2}})}= \zeta^{{}^{{ \mathfrak{B}}{ \mathfrak{B}}}}_{(1)}\circ \Psi_{({ \mathfrak{m}}_{{ \mathfrak{l}}_{1}},{ \mathfrak{m}}_{{ \mathfrak{l}}_{2}})}={ \mathfrak{m}}_{{ \mathfrak{l}}_{1}},\\
&& \zeta^{{}^{{ \mathfrak{B}} { \mathfrak{B}}{ \mathfrak{B}}}}_{(3)} \circ \Omega \circ \Psi_{({ \mathfrak{m}}_{{ \mathfrak{l}}_{1}},{ \mathfrak{m}}_{{ \mathfrak{l}}_{2}})} = \zeta^{{}^{{ \mathfrak{B}}{ \mathfrak{B}}}}_{2}\circ \Psi_{({ \mathfrak{m}}_{{ \mathfrak{l}}_{1}},{ \mathfrak{m}}_{{ \mathfrak{l}}_{2}})}={ \mathfrak{m}}_{{ \mathfrak{l}}_{2}},\\
&& \zeta^{{}^{{ \mathfrak{B}} { \mathfrak{B}}{ \mathfrak{B}}}}_{(1)} \circ \Omega \circ \Psi_{({ \mathfrak{m}}_{{ \mathfrak{l}}_{1}},{ \mathfrak{m}}_{{ \mathfrak{l}}_{2}})} ={ \mathfrak{m}}_{({ \mathfrak{l}}_{1} 
\sqcap_{{}_{{ \mathfrak{E}}_{ \mathfrak{S}}}}
{ \mathfrak{l}}_{2})}.
\end{eqnarray}
Then, $\Omega \circ \Psi_{({ \mathfrak{m}}_{{ \mathfrak{l}}_{1}},{ \mathfrak{m}}_{{ \mathfrak{l}}_{2}})}$ is a joint morphism for the triple $({ \mathfrak{m}}_{({ \mathfrak{l}}_{1} 
\sqcap_{{}_{{ \mathfrak{E}}_{ \mathfrak{S}}}}
{ \mathfrak{l}}_{2})},{ \mathfrak{m}}_{{ \mathfrak{l}}_{1}}$, ${ \mathfrak{m}}_{{ \mathfrak{l}}_{2}})$.\\
As a result, ${ \mathfrak{m}}_{{ \mathfrak{l}}_{1}}$, ${ \mathfrak{m}}_{{ \mathfrak{l}}_{2}}$ and ${ \mathfrak{m}}_{({ \mathfrak{l}}_{1} 
\sqcap_{{}_{{ \mathfrak{E}}_{ \mathfrak{S}}}}
{ \mathfrak{l}}_{2})}$ are jointly compatible.
\end{proof}

We can then summarize Lemma \ref{genmeasurementwedgecompatible}, Lemma \ref{genmeasurementconjugatecompatible} and Lemma \ref{genmeasurementconstantcompatible} in terms of the following theorem on contexts :
\begin{theorem}  \label{synthesiscontexts}
Let $C \in { \mathcal{C}}$ designate a maximal compatibility context, we have 
\begin{eqnarray}
&&{ \mathfrak{Y}}_{{ \mathfrak{E}}_{ \mathfrak{S}}}\in C \;\;\textit{\rm and}\;\; \bot_{{ \mathfrak{E}}_{ \mathfrak{S}}}\in C  \;\;\textit{\rm and}\;\;  \overline{{ \mathfrak{Y}}_{{ \mathfrak{E}}_{ \mathfrak{S}}}}\in C
,\label{contexteta}\\
&& \forall { \mathfrak{l}}\in C,\;\;\overline{\; \mathfrak{l}\;}\in C,\label{contextbar}\\
&&\forall { \mathfrak{l}}_1,{ \mathfrak{l}}_2\in C,\;\;({ \mathfrak{l}}_1\sqcap_{{}_{{{ \mathfrak{E}}_{ \mathfrak{S}}}}}{ \mathfrak{l}}_2)\in C.\label{contextwedge}
\end{eqnarray}
We note the following fact, implicit in (\ref{contextwedge}) and clarified in Lemma \ref{genmeasurementwedgecompatible} :
\begin{eqnarray}
\forall { \mathfrak{l}}_1,{ \mathfrak{l}}_2\in \overline{ \mathfrak{E}}_{ \mathfrak{S}},&& ({ \mathfrak{l}}_1\sqcap_{{}_{{{ \mathfrak{E}}_{ \mathfrak{S}}}}}{ \mathfrak{l}}_2)\in { \mathfrak{E}}_{ \mathfrak{S}}\smallsetminus \overline{ \mathfrak{E}}_{ \mathfrak{S}}\;\;\Rightarrow \;\; \left(\nexists C\in {\mathcal{C}}\;\;\vert\;\;{ \mathfrak{l}}_1,{ \mathfrak{l}}_2\in C\right).\;\;\;\;\;\;\;\;\;
\end{eqnarray}
\end{theorem}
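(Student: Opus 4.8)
\textbf{Proof plan for Theorem \ref{synthesiscontexts}.}

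The plan is to verify each of the three displayed lines by combining the three preceding lemmas with the maximality of $C$, all within the framework where a maximal compatibility context is a subset of $\overline{ \mathfrak{E}}_{ \mathfrak{S}}$ whose associated measurements $\{{ \mathfrak{m}}_{ \mathfrak{l}}\;\vert\;{ \mathfrak{l}}\in C\}$ are jointly compatible and which cannot be enlarged while preserving this property. The key observation throughout is that each lemma says: if a finite family of real measurements is jointly compatible, then one can adjoin a further measurement (a constant $\tau_\eta$, a conjugate $\overline{{ \mathfrak{l}}}$, or an infimum ${ \mathfrak{l}}_1\sqcap { \mathfrak{l}}_2$) and retain joint compatibility. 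Since a maximal context $C$ is by definition a family all of whose finite subfamilies are jointly compatible — and since jointly compatibility of an $(N+1)$-uple follows from that of $N$-uples via the explicit joint-morphism constructions — adjoining such an element to $C$ produces a larger jointly compatible family, hence by maximality that element must already lie in $C$.

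First I would establish (\ref{contexteta}). Any context $C$ is nonempty (it contains at least one real effect, or if $C$ were empty the singleton $\{{ \mathfrak{Y}}_{{ \mathfrak{E}}_{ \mathfrak{S}}}\}$ is trivially jointly compatible since ${ \mathfrak{m}}_{{ \mathfrak{Y}}_{{ \mathfrak{E}}_{ \mathfrak{S}}}}=\tau_{\textit{\bf Y}}$ by (\ref{ety}), contradicting maximality). Pick any ${ \mathfrak{l}}_0\in C$; by Lemma \ref{genmeasurementselfcompatible} the family remains jointly compatible after duplicating ${ \mathfrak{l}}_0$, and by Lemma \ref{genmeasurementconstantcompatible} we may adjoin $\tau_{\textit{\bf Y}}={ \mathfrak{m}}_{{ \mathfrak{Y}}_{{ \mathfrak{E}}_{ \mathfrak{S}}}}$, $\tau_\bot={ \mathfrak{m}}_{\bot_{{ \mathfrak{E}}_{ \mathfrak{S}}}}$ and $\tau_{\textit{\bf N}}={ \mathfrak{m}}_{\overline{{ \mathfrak{Y}}_{{ \mathfrak{E}}_{ \mathfrak{S}}}}}$ (using ${ \mathfrak{m}}_{\overline{{ \mathfrak{Y}}}}=\omega\circ{ \mathfrak{m}}_{{ \mathfrak{Y}}}=\tau_{\textit{\bf N}}$). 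Maximality then forces ${ \mathfrak{Y}}_{{ \mathfrak{E}}_{ \mathfrak{S}}},\bot_{{ \mathfrak{E}}_{ \mathfrak{S}}},\overline{{ \mathfrak{Y}}_{{ \mathfrak{E}}_{ \mathfrak{S}}}}\in C$. Next, for (\ref{contextbar}): given ${ \mathfrak{l}}\in C$, the whole of $C$ together with $\overline{{ \mathfrak{l}}}$ is jointly compatible by Lemma \ref{genmeasurementconjugatecompatible} (applied after reindexing so that ${ \mathfrak{l}}$ plays the role of ${ \mathfrak{l}}_1$; one checks the construction $\Phi'=(\omega\widetilde{\otimes}\mathrm{id}^{\widetilde{\otimes}N})\circ\Psi$ produces the required joint morphism for the enlarged tuple), so $\overline{{ \mathfrak{l}}}\in C$ by maximality. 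Finally, for (\ref{contextwedge}): given ${ \mathfrak{l}}_1,{ \mathfrak{l}}_2\in C$, Lemma \ref{genmeasurementwedgecompatible} yields a joint morphism for $C\cup\{{ \mathfrak{l}}_1\sqcap_{{}_{{ \mathfrak{E}}_{ \mathfrak{S}}}}{ \mathfrak{l}}_2\}$ — and crucially (\ref{precisionrealmorphism}) guarantees ${ \mathfrak{l}}_1\sqcap_{{}_{{ \mathfrak{E}}_{ \mathfrak{S}}}}{ \mathfrak{l}}_2\in\overline{ \mathfrak{E}}_{ \mathfrak{S}}$, so this is a legitimate enlargement within $\overline{ \mathfrak{E}}_{ \mathfrak{S}}$ — hence ${ \mathfrak{l}}_1\sqcap_{{}_{{ \mathfrak{E}}_{ \mathfrak{S}}}}{ \mathfrak{l}}_2\in C$. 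The concluding remark is then the contrapositive of this last step: if $({ \mathfrak{l}}_1\sqcap_{{}_{{ \mathfrak{E}}_{ \mathfrak{S}}}}{ \mathfrak{l}}_2)\notin\overline{ \mathfrak{E}}_{ \mathfrak{S}}$ then no context can contain both ${ \mathfrak{l}}_1$ and ${ \mathfrak{l}}_2$, because (\ref{precisionrealmorphism}) shows membership in a common context forces the infimum to be real.

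The one point requiring genuine care — the main obstacle — is the passage from "finite subfamilies of $C$ are jointly compatible" to "$C\cup\{\text{new element}\}$ has all finite subfamilies jointly compatible". The lemmas are stated for $N$-uples and produce joint morphisms into ${\mathfrak{B}}^{\widetilde{\otimes}N}$, so I would be careful to apply them only to finite subsets of $C$ containing the relevant effects ${ \mathfrak{l}}_1,{ \mathfrak{l}}_2$, observe that the enlarged finite tuple is jointly compatible, and conclude that every finite subfamily of $C\cup\{\cdots\}$ is jointly compatible (those not containing the new element already were). This suffices for the enlarged set to be a compatibility context, and the remaining work is purely the maximality argument. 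I would also note in passing that (\ref{contextwedge}) implicitly needs the infima to stay inside the \emph{real} effect space, which is precisely what (\ref{precisionrealmorphism}) — itself a consequence of (\ref{relationsjointchannelreal}) — provides; without it the statement would be false, which is exactly the content of the final displayed implication.
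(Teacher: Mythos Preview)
Your proposal is correct and follows exactly the route the paper intends: the theorem is explicitly introduced as a summary of Lemmas \ref{genmeasurementconstantcompatible}, \ref{genmeasurementconjugatecompatible}, and \ref{genmeasurementwedgecompatible}, and the paper gives no further proof. Your argument---apply each lemma to enlarge the family, then invoke maximality of $C$---is precisely what is meant, and you are in fact more careful than the paper about the passage from finite tuples to arbitrary contexts and about the role of (\ref{precisionrealmorphism}) in keeping the infimum inside $\overline{ \mathfrak{E}}_{ \mathfrak{S}}$.
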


\noindent We now intent to introduce our central ontic notion.

\begin{definition}
Once is known the compatibility cover ${ \mathcal{C}}$, we define an {\em operational description} ${ \mathfrak{D}}$ to be a family of maps $({ \mathfrak{d}}^{C})_{C \in { \mathcal{C}}}$,  where each map ${ \mathfrak{d}}^{C}$ is an element of ${ \mathfrak{B}}^C$ (i.e.  a counterfactual statement in ${ \mathfrak{B}}$ is attributed to any of the jointly compatible measurements ${ \mathfrak{m}}_{ \mathfrak{l}}$ for any ${ \mathfrak{l}}\in C$), these maps being assigned to satisfy the following conditions
\begin{eqnarray}
&& \forall C\in { \mathcal{C}},\;\; { \mathfrak{d}}^{C}({ \mathfrak{Y}}_{\overline{ \mathfrak{E}}_{ \mathfrak{S}}})=\textit{\bf Y}%\;\;\textit{\rm and}\;\; { \mathfrak{d}}^{C}(\bot_{{ \mathfrak{E}}_{ \mathfrak{S}},})=\bot%\;\;\textit{\rm and}\;\; { \mathfrak{d}}^{C}(\overline{{ \mathfrak{Y}}_{{ \mathfrak{E}}}})=\textit{\bf N},\;\;\;\;\;\;\;\;\;\;
\label{sheafdistributioneta}\\
&& \forall C\in { \mathcal{C}},\forall { \mathfrak{l}}\in C,\;\; { \mathfrak{d}}^{C}(\overline{\; \mathfrak{l}\;})= \overline{{ \mathfrak{d}}^{C}({ \mathfrak{l}})},\label{sheafdistributionbar}\\
&& \forall C\in { \mathcal{C}},\forall \{\,{ \mathfrak{l}}_i\;\vert\; i\in I\,\}\subseteq C,\;\; { \mathfrak{d}}^{C}(\bigsqcap{}^{{}^{{ \mathfrak{E}}_{ \mathfrak{S}}}}_{i\in I}{ \mathfrak{l}}_i)= \bigwedge{}_{i\in I} \;{ \mathfrak{d}}^{C}({ \mathfrak{l}}_i).\label{sheafdistributionwedge}
\end{eqnarray}
\end{definition}

\begin{definition}
If the operational description ${ \mathfrak{D}}:=({ \mathfrak{d}}^{C})_{C \in { \mathcal{C}}}$  satisfies the following supplementary condition 
\begin{eqnarray}
\textit{ (sheaf condition)}&& \forall C,C'\in { \mathcal{C}},\forall { \mathfrak{l}}\in C\cap C', \;\; { \mathfrak{d}}^{C}({ \mathfrak{l}})={ \mathfrak{d}}^{C'}({ \mathfrak{l}}),\;\;\;\;\;\;\;\;\;\;\;\;\;\;\;\;\;\;\;\;\label{sheafdistributionrecollement}
\end{eqnarray}
it will be designated as a {\em coherent operational description}. \\
The set of coherent  operational descriptions associated to the compatibility cover ${ \mathcal{C}}$ will be denoted ${ \mathcal{D}}$ and called {\em the empirical model} for the space of states ${ \mathfrak{S}}$.
\end{definition}

\begin{definition}
The empirical model ${ \mathcal{D}}$ associated to the space of states ${ \mathfrak{S}}$ is an Inf semi-lattice by defining its infima as follows :
\begin{eqnarray}
\forall ({ \mathfrak{d}}_1^{C})_{C \in { \mathcal{C}}},({ \mathfrak{d}}_2^{C})_{C \in { \mathcal{C}}}\in { \mathcal{D}},\forall C\in { \mathcal{C}},\forall { \mathfrak{l}}\in C,&& ({ \mathfrak{d}}_1^{C}\sqcap_{{}_{{ \mathcal{D}}}} { \mathfrak{d}}_2^{C})({ \mathfrak{l}}):=  { \mathfrak{d}}_1^{C}({ \mathfrak{l}})\wedge { \mathfrak{d}}_2^{C}({ \mathfrak{l}}).\;\;\;\;\;\;\;\;\;\;\;\;\;\;\;\;\;\;\;
\end{eqnarray}
\end{definition}

\begin{definition}
If an operational description ${ \mathfrak{D}}:=({ \mathfrak{d}}^C)_{C\in { \mathcal{C}}}$ is such that there exists a real state $\sigma\in \overline{ \mathfrak{S}}$ satisfying  
\begin{eqnarray}
\forall C\in { \mathcal{C}},\forall { \mathfrak{l}}\in C&& { \mathfrak{d}}^C({ \mathfrak{l}}) = \epsilon^{{ \mathfrak{S}}}_{ \mathfrak{l}}(\sigma),\label{definglobally}
\end{eqnarray}
we will say that ${ \mathfrak{D}}$ is {\em defined globally}. It will then be denoted $({ \mathfrak{d}}^C_{(\sigma)})_{C\in { \mathcal{C}}}$. Note that if ${ \mathfrak{D}}$ is defined globally, then it is trivially a coherent  operational description. 
\end{definition}

\begin{definition}
The empirical model ${ \mathcal{D}}$ is said to be {\em non-contextual} iff every ${ \mathfrak{D}}\in { \mathcal{D}}$ is defined globally, otherwise it is said to be {\em contextual}.
\end{definition}

\begin{theorem}
If ${ \mathfrak{S}}$ is a simplex then every coherent  operational description ${ \mathfrak{d}}$ are defined globally. The empirical model associated to a simplex space of states is then non-contextual.
\end{theorem}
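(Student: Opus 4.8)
The plan is to unwind the definitions and exploit the two facts already established for simplices: first, by Theorem \ref{theoremNcompatiblemeasurements} the compatibility cover $\mathcal{C}$ reduces to the single context $\overline{\mathfrak{E}}_{\mathfrak{S}}={\mathfrak{E}}_{\mathfrak{S}}$; second, by Lemma \ref{lemmaUsimplex} every state of $\mathfrak{S}$ has a unique decomposition $\sigma=\bigsqcap^{\mathfrak{S}}U_\sigma$ with $U_\sigma=\underline{\sigma}_{{}_{\mathfrak{S}}}\subseteq\mathfrak{S}^{pure}$. Since there is only one context, a coherent operational description $\mathfrak{D}=({\mathfrak{d}}^C)_{C\in\mathcal{C}}$ is nothing but a single map ${\mathfrak{d}}:{\mathfrak{E}}_{\mathfrak{S}}\to{\mathfrak{B}}$ satisfying ${\mathfrak{d}}({\mathfrak{Y}}_{{\mathfrak{E}}_{\mathfrak{S}}})=\textit{\bf Y}$, ${\mathfrak{d}}(\overline{\mathfrak{l}})=\overline{{\mathfrak{d}}(\mathfrak{l})}$ and ${\mathfrak{d}}(\bigsqcap^{{\mathfrak{E}}_{\mathfrak{S}}}_{i}{\mathfrak{l}}_i)=\bigwedge_i{\mathfrak{d}}({\mathfrak{l}}_i)$ (the sheaf condition (\ref{sheafdistributionrecollement}) being vacuous); in other words ${\mathfrak{d}}$ is exactly a morphism from ${\mathfrak{E}}_{\mathfrak{S}}$ to ${\mathfrak{B}}$ in the sense matching Theorem \ref{blepsilonsigma} once we note the correspondence between effects and measurements via $\varphi$ of (\ref{defvarphi}).

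The main step is then to produce, from such a ${\mathfrak{d}}$, a real state $\sigma_{\mathfrak{d}}\in\overline{\mathfrak{S}}=\mathfrak{S}$ with ${\mathfrak{d}}({\mathfrak{l}})=\epsilon^{\mathfrak{S}}_{\mathfrak{l}}(\sigma_{\mathfrak{d}})$ for all ${\mathfrak{l}}\in{\mathfrak{E}}_{\mathfrak{S}}$. I would invoke Lemma \ref{blepsilonsigma} directly: the three hypotheses (\ref{theorembl2}), (\ref{theorembl3}), (\ref{theorembl4}) required there are precisely (\ref{sheafdistributionwedge}), (\ref{sheafdistributionbar}), (\ref{sheafdistributioneta}) for the unique context; hence there exists a unique $\sigma_{\mathfrak{d}}\in\mathfrak{S}$ with $\epsilon^{\mathfrak{S}}_{\mathfrak{l}}(\sigma_{\mathfrak{d}})={\mathfrak{d}}({\mathfrak{l}})$ for all ${\mathfrak{l}}$. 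Since $\mathfrak{S}$ is a simplex we have $\mathfrak{S}=\overline{\mathfrak{S}}$ by the theorem around (\ref{starsimplex}), so $\sigma_{\mathfrak{d}}$ is a real state and condition (\ref{definglobally}) holds. Thus $\mathfrak{D}$ is defined globally, and as every coherent operational description is defined globally the empirical model $\mathcal{D}$ is non-contextual by definition.

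The one point requiring care — and the place I expect the only friction — is the precise bookkeeping of the hypotheses: (\ref{sheafdistributionwedge}) is stated for families indexed by arbitrary $I$ and matches (\ref{theorembl2}) verbatim, but one must make sure that the "joint compatibility cover collapses to $\{{\mathfrak{E}}_{\mathfrak{S}}\}$" claim of Theorem \ref{theoremNcompatiblemeasurements} is being applied with $\overline{\mathfrak{E}}_{\mathfrak{S}}={\mathfrak{E}}_{\mathfrak{S}}$, which is itself part of the statement of Theorem \ref{theoremcompatiblemeasurements} for a simplex. So the skeleton of the proof is: (i) observe $\mathcal{C}=\{{\mathfrak{E}}_{\mathfrak{S}}\}$ and $\overline{\mathfrak{E}}_{\mathfrak{S}}={\mathfrak{E}}_{\mathfrak{S}}$; (ii) rewrite a coherent operational description as a single ${\mathfrak{B}}$-valued homomorphic map on ${\mathfrak{E}}_{\mathfrak{S}}$; (iii) apply Lemma \ref{blepsilonsigma} to obtain $\sigma_{\mathfrak{d}}\in\mathfrak{S}$; (iv) use $\mathfrak{S}=\overline{\mathfrak{S}}$ to conclude $\sigma_{\mathfrak{d}}\in\overline{\mathfrak{S}}$ and hence that $\mathfrak{D}=({\mathfrak{d}}^C_{(\sigma_{\mathfrak{d}})})_{C\in\mathcal{C}}$ is defined globally; (v) conclude non-contextuality from the definition. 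No genuine obstacle is expected; the content is entirely a matching of hypotheses already proved.
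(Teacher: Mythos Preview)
Your proof is correct and follows exactly the paper's own argument, which is recorded there as a one-line ``Direct consequence of Theorem \ref{theoremNcompatiblemeasurements} and Theorem \ref{blepsilonsigma}.'' You have simply unpacked that sentence, including the observation $\mathfrak{S}=\overline{\mathfrak{S}}$ needed to land the state in $\overline{\mathfrak{S}}$.
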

\begin{proof}
Direct consequence of Theorem \ref{theoremNcompatiblemeasurements} and Theorem \ref{blepsilonsigma}.
\end{proof}

It appears clearly that an operational description ${ \mathfrak{D}}$ defines a collection of preparation procedures. More explicitly, each complete context $U$ leads to the preparation procedure selecting the samples which return the correct result (i.e. the correct counterfactual statement) to the family of jointly-compatible measurements $({ \mathfrak{m}}_{ \mathfrak{l}})_{{ \mathfrak{l}}\in U}$. \\ 
In an empirical model,  the aim of a preparation procedure is to select an ontic state for the system (or at least a mixture of individual ontic states) compatible with all the observed data. The knowledge of this ontic state is necessary and exactly sufficient to fix the result returned by any of the measurements  that can be realized simultaneously (i.e. in a jointly-compatible way) on the system.  

\begin{definition}
The {\em ontic state} associated to the operational description ${ \mathfrak{D}}:=({ \mathfrak{d}}^{C})_{C \in { \mathcal{C}}} \in { \mathcal{D}}$ is a family $(\Sigma^C_{ \mathfrak{D}})_{C \in { \mathcal{C}}}\in \overline{ \mathfrak{S}}^{{ \mathcal{C}}}$ satisfying the following conditions
\begin{eqnarray}
&&\hspace{-1.5cm} \forall C \in { \mathcal{C}}, \forall { \mathfrak{l}}\in C,\;\; { \mathfrak{m}}_{ \mathfrak{l}}(\Sigma^C_{ \mathfrak{D}})={ \mathfrak{d}}^C({{ \mathfrak{l}}}),\label{defonticstate1}\\
&&\hspace{-1.5cm} \forall C\in { \mathcal{C}},\Sigma^C_{ \mathfrak{D}}=\bigsqcap{}^{{}^{ \mathfrak{S}}}\{\,\sigma^C\in \overline{ \mathfrak{S}}\;\vert\; \forall { \mathfrak{l}}\in C,\;{ \mathfrak{m}}_{ \mathfrak{l}}(\sigma^C)\geq { \mathfrak{d}}^C({{ \mathfrak{l}}})\,\}.\label{defonticstate2}\;\;\;\;\;\;\;\;\;\;\;\;\;\;\;\;\;\;
\end{eqnarray}
\end{definition}

\begin{theorem}\label{existenceonticstate}
Let us fix a coherent  operational description ${ \mathfrak{D}}:=({ \mathfrak{d}}^{C})_{C \in { \mathcal{C}}} \in {{ \mathcal{D}}}$ associated to the compatibility cover ${ \mathcal{C}}$.  Let us also fix a compatibility context $C$.  Then, 
\begin{eqnarray}
\exists \sigma \in \overline{ \mathfrak{S}} &\vert & (\,\epsilon^{ \mathfrak{S}}_{{ \mathfrak{l}}}(\sigma)={ \mathfrak{d}}^C({{ \mathfrak{l}}}),\;\forall { \mathfrak{l}}\in C\,).\label{descriptionstate}
\end{eqnarray} 
More explicitly, we will fix explicitly the solution $\sigma^C_{ \mathfrak{D}}$ of the equation (\ref{descriptionstate}) defined by
\begin{eqnarray}
{ \mathfrak{l}}^C_{ \mathfrak{D}}&:=&\bigsqcap{}^{{}^{\overline{ \mathfrak{E}}_{ \mathfrak{S}}}}\{\, { \mathfrak{l}}\in C \;\vert\; { \mathfrak{d}}^C({{ \mathfrak{l}}})=\textit{\bf Y}\,\}\label{onticstate1}\\
\sigma^C_{ \mathfrak{D}}&:=&\bigsqcap{}^{{}^{{ \mathfrak{S}}}}(\epsilon^{ \mathfrak{S}}_{{ \mathfrak{l}}^C_{ \mathfrak{D}}})^{-1}(\textit{\bf Y}) \;\in \overline{ \mathfrak{S}} \label{onticstate2}
\end{eqnarray}
The family $(\sigma^C_{ \mathfrak{D}})_{C \in { \mathcal{C}}}\in \overline{ \mathfrak{S}}{}^{\, \mathcal{C}}$ is the ontic state associated to ${ \mathfrak{D}}$.
\end{theorem}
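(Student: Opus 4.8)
The plan is to verify the two defining conditions \eqref{defonticstate1} and \eqref{defonticstate2} for the proposed family $(\sigma^C_{ \mathfrak{D}})_{C \in { \mathcal{C}}}$ given by \eqref{onticstate1}--\eqref{onticstate2}, and along the way to establish the existence statement \eqref{descriptionstate}. First I would fix a compatibility context $C\in{ \mathcal{C}}$ and set ${ \mathfrak{l}}^C_{ \mathfrak{D}}:=\bigsqcap{}^{{}^{\overline{ \mathfrak{E}}_{ \mathfrak{S}}}}\{\, { \mathfrak{l}}\in C \;\vert\; { \mathfrak{d}}^C({{ \mathfrak{l}}})=\textit{\bf Y}\,\}$. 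This infimum exists in $\overline{ \mathfrak{E}}_{ \mathfrak{S}}$ because $\overline{ \mathfrak{E}}_{ \mathfrak{S}}$ is a down-complete Inf semi-lattice (axiom \eqref{axiomreduc2}), and the indexing set is nonempty since ${ \mathfrak{Y}}_{{}_{{ \mathfrak{E}}_{ \mathfrak{S}}}}\in C$ (Theorem \ref{synthesiscontexts}, \eqref{contexteta}) and ${ \mathfrak{d}}^C({ \mathfrak{Y}}_{{}_{\overline{ \mathfrak{E}}_{ \mathfrak{S}}}})=\textit{\bf Y}$ (axiom \eqref{sheafdistributioneta}). Then $\sigma^C_{ \mathfrak{D}}:=\bigsqcap{}^{{}^{{ \mathfrak{S}}}}(\epsilon^{ \mathfrak{S}}_{{ \mathfrak{l}}^C_{ \mathfrak{D}}})^{-1}(\textit{\bf Y})$ is exactly of the form treated in the proof of Lemma \ref{blepsilonsigma} (where $\Sigma_{{}_{{ \mathfrak{l}}_B}}=\bigsqcap{}^{{}^{ \mathfrak{S}}}({\epsilon}^{ \mathfrak{S}}_{{ \mathfrak{l}}_B})^{\;-1}(\textit{\bf Y})$), and the key fact to import from there is the equivalence $\epsilon^{ \mathfrak{S}}_{{ \mathfrak{l}}}(\sigma^C_{ \mathfrak{D}})=\textit{\bf Y}\Leftrightarrow { \mathfrak{l}}\sqsupseteq_{{}_{{ \mathfrak{E}}_{ \mathfrak{S}}}}{ \mathfrak{l}}^C_{ \mathfrak{D}}$, together with $\epsilon^{ \mathfrak{S}}_{{ \mathfrak{l}}}(\sigma^C_{ \mathfrak{D}})=\textit{\bf N}\Leftrightarrow \overline{\, \mathfrak{l}\,}\sqsupseteq_{{}_{{ \mathfrak{E}}_{ \mathfrak{S}}}}{ \mathfrak{l}}^C_{ \mathfrak{D}}$; moreover one must observe that ${ \mathfrak{l}}^C_{ \mathfrak{D}}$ has the peculiar form ${ \mathfrak{l}}_{(\Sigma,\centerdot)}$, which also follows from the Lemma \ref{blepsilonsigma} argument.

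Next I would verify \eqref{descriptionstate}, i.e. $\epsilon^{ \mathfrak{S}}_{ \mathfrak{l}}(\sigma^C_{ \mathfrak{D}})={ \mathfrak{d}}^C({{ \mathfrak{l}}})$ for all ${ \mathfrak{l}}\in C$. For this I need that the restriction ${ \mathfrak{d}}^C\!:\!C\to{ \mathfrak{B}}$ behaves like the evaluation of an element of $\overline{ \mathfrak{E}}_{ \mathfrak{S}}$: it is Inf-semilattice homomorphic in ${ \mathfrak{l}}$ by \eqref{sheafdistributionwedge}, commutes with the bar by \eqref{sheafdistributionbar}, and sends ${ \mathfrak{Y}}_{\overline{ \mathfrak{E}}_{ \mathfrak{S}}}$ to $\textit{\bf Y}$ by \eqref{sheafdistributioneta}; also $C$ is closed under $\sqcap_{{}_{{ \mathfrak{E}}_{ \mathfrak{S}}}}$, under $\overline{(\cdot)}$, and contains ${ \mathfrak{Y}}_{\overline{ \mathfrak{E}}_{ \mathfrak{S}}}$ and $\bot_{{ \mathfrak{E}}_{ \mathfrak{S}}}$ (Theorem \ref{synthesiscontexts}). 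Consequently, for ${ \mathfrak{l}}\in C$ one has ${ \mathfrak{d}}^C({ \mathfrak{l}})=\textit{\bf Y}$ iff ${ \mathfrak{l}}\sqsupseteq_{{}_{{ \mathfrak{E}}_{ \mathfrak{S}}}}{ \mathfrak{l}}^C_{ \mathfrak{D}}$ — the forward direction is the definition of ${ \mathfrak{l}}^C_{ \mathfrak{D}}$, and the backward direction uses that ${ \mathfrak{d}}^C$ is order-preserving (a consequence of \eqref{sheafdistributionwedge} exactly as in the derivation of \eqref{theorembl1}) together with ${ \mathfrak{d}}^C({ \mathfrak{l}}^C_{ \mathfrak{D}})=\textit{\bf Y}$ from \eqref{sheafdistributionwedge}. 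Combining this with the analogous statement for $\textit{\bf N}$ via \eqref{sheafdistributionbar} and the bar-closure of $C$, I get $\epsilon^{ \mathfrak{S}}_{ \mathfrak{l}}(\sigma^C_{ \mathfrak{D}})={ \mathfrak{d}}^C({ \mathfrak{l}})$ on all three possible values. This simultaneously proves \eqref{defonticstate1}, since ${ \mathfrak{m}}_{ \mathfrak{l}}(\sigma^C_{ \mathfrak{D}})=\epsilon^{ \mathfrak{S}}_{ \mathfrak{l}}(\sigma^C_{ \mathfrak{D}})$ by \eqref{measurementeffect}.

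For condition \eqref{defonticstate2} I would show $\sigma^C_{ \mathfrak{D}}=\bigsqcap{}^{{}^{ \mathfrak{S}}}\{\,\sigma^C\in \overline{ \mathfrak{S}}\;\vert\; \forall { \mathfrak{l}}\in C,\;{ \mathfrak{m}}_{ \mathfrak{l}}(\sigma^C)\geq { \mathfrak{d}}^C({{ \mathfrak{l}}})\,\}$. The inclusion $\supseteq$ is immediate: $\sigma^C_{ \mathfrak{D}}$ itself belongs to that set by the computation just completed, so its infimum is $\sqsubseteq_{{}_{ \mathfrak{S}}}\sigma^C_{ \mathfrak{D}}$. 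For $\sqsupseteq$, take any $\sigma^C\in\overline{ \mathfrak{S}}$ with ${ \mathfrak{m}}_{ \mathfrak{l}}(\sigma^C)\geq{ \mathfrak{d}}^C({ \mathfrak{l}})$ for all ${ \mathfrak{l}}\in C$; in particular ${ \mathfrak{m}}_{{ \mathfrak{l}}^C_{ \mathfrak{D}}}(\sigma^C)\geq{ \mathfrak{d}}^C({ \mathfrak{l}}^C_{ \mathfrak{D}})=\textit{\bf Y}$, hence $={\textit{\bf Y}}$, i.e. $\epsilon^{ \mathfrak{S}}_{{ \mathfrak{l}}^C_{ \mathfrak{D}}}(\sigma^C)=\textit{\bf Y}$, so $\sigma^C\in(\epsilon^{ \mathfrak{S}}_{{ \mathfrak{l}}^C_{ \mathfrak{D}}})^{-1}(\textit{\bf Y})$ and therefore $\sigma^C_{ \mathfrak{D}}=\bigsqcap{}^{{}^{ \mathfrak{S}}}(\epsilon^{ \mathfrak{S}}_{{ \mathfrak{l}}^C_{ \mathfrak{D}}})^{-1}(\textit{\bf Y})\sqsubseteq_{{}_{ \mathfrak{S}}}\sigma^C$. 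Taking the infimum over all such $\sigma^C$ gives $\sigma^C_{ \mathfrak{D}}\sqsubseteq_{{}_{ \mathfrak{S}}}\bigsqcap{}^{{}^{ \mathfrak{S}}}\{\cdots\}$, and with the reverse inclusion the equality \eqref{defonticstate2} follows. Finally I would remark that $\sigma^C_{ \mathfrak{D}}\in\overline{ \mathfrak{S}}$ because ${ \mathfrak{l}}^C_{ \mathfrak{D}}\in\overline{ \mathfrak{E}}_{ \mathfrak{S}}$ is a real effect of the form ${ \mathfrak{l}}_{(\Sigma,\centerdot)}$ and $\overline{ \mathfrak{S}}$ is closed under arbitrary infima in ${ \mathfrak{S}}$. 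The main obstacle is the careful bookkeeping in the last paragraph of the Lemma \ref{blepsilonsigma} proof — namely establishing that ${ \mathfrak{l}}^C_{ \mathfrak{D}}$ necessarily has the truncated form ${ \mathfrak{l}}_{(\Sigma_{{ \mathfrak{l}}^C_{ \mathfrak{D}}},\centerdot)}$ rather than ${ \mathfrak{l}}_{(\Sigma,\Sigma')}$ with $\Sigma'\neq\centerdot$, which is what makes the equivalence $\epsilon^{ \mathfrak{S}}_{ \mathfrak{l}}(\sigma^C_{ \mathfrak{D}})=\textit{\bf Y}\Leftrightarrow { \mathfrak{l}}\sqsupseteq{ \mathfrak{l}}^C_{ \mathfrak{D}}$ work cleanly; everything else is a direct transcription of already-proven facts, so I would cite Lemma \ref{blepsilonsigma} and Theorem \ref{synthesiscontexts} and keep the argument short.
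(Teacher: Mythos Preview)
Your proposal is correct and follows essentially the same approach as the paper: define ${ \mathfrak{l}}^C_{ \mathfrak{D}}$ as the infimum over the $\textit{\bf Y}$-fiber of ${ \mathfrak{d}}^C$, observe (via Theorem \ref{synthesiscontexts} and \eqref{sheafdistributionwedge}) that ${ \mathfrak{l}}^C_{ \mathfrak{D}}\in C$ with ${ \mathfrak{d}}^C({ \mathfrak{l}}^C_{ \mathfrak{D}})=\textit{\bf Y}$, establish the equivalence ${ \mathfrak{d}}^C({ \mathfrak{l}})=\textit{\bf Y}\Leftrightarrow { \mathfrak{l}}\sqsupseteq{ \mathfrak{l}}^C_{ \mathfrak{D}}\Leftrightarrow \epsilon^{ \mathfrak{S}}_{ \mathfrak{l}}(\sigma^C_{ \mathfrak{D}})=\textit{\bf Y}$ by the Lemma \ref{blepsilonsigma} argument, and then handle $\textit{\bf N}$ via the bar. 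The only cosmetic difference is that the paper reproves the relevant part of Lemma \ref{blepsilonsigma} inline rather than citing it, and your treatment of \eqref{defonticstate2} is slightly more explicit about both inequalities.
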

\begin{proof}
Let us consider the effect ${ \mathfrak{l}}^C_{ \mathfrak{D}}:=\bigsqcap{}^{{}^{\overline{ \mathfrak{E}}_{ \mathfrak{S}}}}\{\, { \mathfrak{l}}\in C \;\vert\; { \mathfrak{d}}^C({{ \mathfrak{l}}})=\textit{\bf Y}\,\}$.  Note that the infimum in the definition of ${ \mathfrak{l}}^C_{ \mathfrak{D}}$ exists because $\overline{ \mathfrak{E}}_{ \mathfrak{S}}$ is a down-complete Inf semi-lattice and because ${ \mathfrak{Y}}_{\overline{ \mathfrak{E}}_{ \mathfrak{S}}}\in \{\, { \mathfrak{l}}\in C \;\vert\; { \mathfrak{d}}^C({{ \mathfrak{l}}})=\textit{\bf Y}\,\}\not= \varnothing$.  Moreover, due to the property (\ref{contextwedge}), %and the fact that $\overline{ \mathfrak{E}}_{ \mathfrak{S}}$ is finite, and then $C$ is finite,
 we deduce that ${ \mathfrak{l}}^C_{ \mathfrak{D}}$ is in $C$. \\
Secondly, ${ \mathfrak{d}}^{C}({ \mathfrak{l}}^C_{ \mathfrak{D}})=\textit{\bf Y}$ because of the property (\ref{sheafdistributionwedge}). \\
Thirdly, we note that, for any ${ \mathfrak{l}}\in C$, we have ${ \mathfrak{l}}\sqsupseteq_{{}_{{ \mathfrak{E}}}}{ \mathfrak{l}}^C_{ \mathfrak{D}}$ implies ${ \mathfrak{d}}^{C}({ \mathfrak{l}})=\textit{\bf Y}$ because ${ \mathfrak{d}}^{C}({ \mathfrak{l}})\wedge \textit{\bf Y}={ \mathfrak{d}}^{C}({ \mathfrak{l}})\wedge { \mathfrak{d}}^{C}({ \mathfrak{l}}^C_{ \mathfrak{D}})={ \mathfrak{d}}^{C}({ \mathfrak{l}}\sqcap_{{}_{\overline{ \mathfrak{E}}_{ \mathfrak{S}}}}
{ \mathfrak{l}}^C_{ \mathfrak{D}})={ \mathfrak{d}}^{C}({ \mathfrak{l}}^C_{ \mathfrak{D}})=\textit{\bf Y}$.  Conversely, the property ${ \mathfrak{d}}^{C}({ \mathfrak{l}})=\textit{\bf Y}$ for ${ \mathfrak{l}}\in C$ implies ${ \mathfrak{l}}\sqsupseteq_{{}_{\overline{ \mathfrak{E}}_{ \mathfrak{S}}}}{ \mathfrak{l}}^C_{ \mathfrak{D}}$ from the explicit definition of ${ \mathfrak{l}}^C_{ \mathfrak{D}}$.\\
As a result, we have $\{\, { \mathfrak{l}}\in C \;\vert\; { \mathfrak{d}}^C({{ \mathfrak{l}}})=\textit{\bf Y}\,\}=(\uparrow^{{}^{C}}\!\!{ \mathfrak{l}}^C_{ \mathfrak{D}})$.\\
We also note that, due to properties (\ref{sheafdistributioneta}) and the definition of ${ \mathfrak{l}}^C_{ \mathfrak{D}}$, we have ${ \mathfrak{l}}^C_{ \mathfrak{D}}={ \mathfrak{l}}^C_{ \mathfrak{D}} \sqcap_{{}_{{ \mathfrak{E}}}} {{ \mathfrak{Y}}}_{{ \mathfrak{E}}}$. Hence, there must exist a state $\sigma^C_{ \mathfrak{D}}\in { \mathfrak{S}}$ such that ${ \mathfrak{l}}^C_{ \mathfrak{D}}={ \mathfrak{l}}_{(\sigma^C_{ \mathfrak{D}},\,\cdot\,)}$. In other words,  $\sigma^C_{ \mathfrak{D}}:=\bigsqcap^{{}^{{ \mathfrak{S}}}}(\epsilon^{ \mathfrak{S}}_{{ \mathfrak{l}}^C_{ \mathfrak{D}}})^{-1}(\textit{\bf Y})$.\\
We can summarize these results by the following equivalence : 
\begin{eqnarray}
\forall C \in { \mathcal{C}},\forall { \mathfrak{l}}\in C, &&{ \mathfrak{d}}^C({{ \mathfrak{l}}})=\textit{\bf Y}\;\Leftrightarrow\; \epsilon^{ \mathfrak{S}}_{{ \mathfrak{l}}}(\sigma^C_{ \mathfrak{D}})=\textit{\bf Y}.
\end{eqnarray}
Using (\ref{etbar}) and (\ref{sheafdistributionbar}), we deduce also : $\epsilon^{ \mathfrak{S}}_{{ \mathfrak{l}}}(\sigma^C_{ \mathfrak{D}})=\textit{\bf N} \;\Leftrightarrow\; \epsilon^{ \mathfrak{S}}_{\overline{\, \mathfrak{l}\,}}(\sigma^C_{ \mathfrak{D}})=\textit{\bf Y} \;\Leftrightarrow\; { \mathfrak{d}}^C(\overline{\,{ \mathfrak{l}}\,})=\textit{\bf Y} \;\Leftrightarrow\; { \mathfrak{d}}^C({ \mathfrak{l}})=\textit{\bf N}$.  
As a final conclusion, we have for any ${ \mathfrak{l}}\in C$ the equality $\epsilon^{ \mathfrak{S}}_{{ \mathfrak{l}}}(\sigma^C_{ \mathfrak{D}})={ \mathfrak{d}}^C({{ \mathfrak{l}}})$.\\
Note then that the defining property (\ref{defonticstate1}) is identical to the property (\ref{descriptionstate}) satisfied by $\sigma^C_{ \mathfrak{D}}$ for any $C \in { \mathcal{C}}$. \\
Let us now consider a family $(\rho^C)_{C \in { \mathcal{C}}}\in { \mathfrak{S}}^{{ \mathcal{C}}}$ satisfying $\forall C \in { \mathcal{C}}, \forall { \mathfrak{l}}\in C, \epsilon^{ \mathfrak{S}}_{{ \mathfrak{l}}}(\rho^C)\geq{ \mathfrak{d}}^C({{ \mathfrak{l}}})$. \\
We have then in particular $\epsilon^{ \mathfrak{S}}_{{ \mathfrak{l}}^C_{ \mathfrak{D}}}(\rho^C)\geq 
{ \mathfrak{d}}^C({ \mathfrak{l}}^C_{ \mathfrak{D}})=\textit{\bf Y}$. Then $\rho^C \sqsupseteq_{{}_{{ \mathfrak{S}}}} \bigsqcap{}^{{}^{{ \mathfrak{S}}}}(\epsilon^{ \mathfrak{S}}_{{ \mathfrak{l}}^C_{ \mathfrak{D}}})^{-1}(\textit{\bf Y})=\sigma^C_{ \mathfrak{D}}$.  This is the defining property (\ref{defonticstate2}).\\
As a conclusion, $(\sigma^C_{ \mathfrak{D}})_{C \in { \mathcal{C}}}$ is the ontic state associated to ${ \mathfrak{D}}$.
\end{proof}

\subsection{Hidden states and contextuality}\label{hiddenstatesandcontextual}  

During this subsection, $(\!({\mathfrak{S}}',\star)\!)_c:=({\mathfrak{S}}',{ \mathcal{Q}}_c({ \mathfrak{S}}'),cl_c^{{{ \mathfrak{S}}'}},\star)$ will define the ontic completion of $({\mathfrak{S}}',\star)$. We will denote equivalently by ${\mathfrak{S}}$ the space of states ${ \mathfrak{J}}^c_{{{ \mathfrak{S}}'}}$, and we will then denote by $\overline{ \mathfrak{S}}$ the space of states ${\mathfrak{S}}'$. As shown before, the states/effects Chu Space $({ \mathfrak{S}},{ \mathfrak{E}}_{{ \mathfrak{S}}},\epsilon^{{ \mathfrak{S}}})$ is equipped with a real structure given simply by $(\overline{ \mathfrak{S}},\star)$.  \\
As before, $\overline{ \mathfrak{E}}_{{ \mathfrak{S}}}$ is defined as the sub Inf semi-lattice of ${ \mathfrak{E}}_{ \mathfrak{S}}$ generated by the elements of 
\begin{eqnarray*}
\{\,{ \mathfrak{l}}_{(\sigma,\sigma')}\;\vert\; \sigma,\sigma'\in \overline{ \mathfrak{S}}\smallsetminus \{\bot_{{}_{{ \mathfrak{S}}}}\}, \sigma'\sqsupseteq_{{}_{\overline{ \mathfrak{S}}}} \sigma^\star\,\} \cup \{ { \mathfrak{l}}_{{}_{(\sigma,\centerdot)}}\;\vert\; \sigma\in \overline{ \mathfrak{S}}\;\}\cup \{ { \mathfrak{l}}_{{}_{(\centerdot,\sigma)}}\;\vert\; \sigma\in \overline{ \mathfrak{S}}\;\}\cup \{\, { \mathfrak{l}}_{{}_{(\centerdot,\centerdot)}}\,\}
\end{eqnarray*}

\begin{definition}
In the following, we will adopt the following notation :
\begin{eqnarray}
{ \mathfrak{U}}_{\overline{ \mathfrak{S}}} & :=& \left( \{ { \mathfrak{l}}_{{}_{(\sigma,\centerdot)}}\;\vert\; \sigma\in \overline{ \mathfrak{S}}\;\}\cup \{ { \mathfrak{l}}_{{}_{(\centerdot,\sigma)}}\;\vert\; \sigma\in \overline{ \mathfrak{S}}\;\}\cup \{\, { \mathfrak{l}}_{{}_{(\centerdot,\centerdot)}}\,\}\right).
\end{eqnarray}
Let us consider a maximal compatibility context $C$ in ${ \mathcal{C}}$. We are necessarily in one of the two following cases :
\begin{itemize} 
\item Maximal compatibility context of Type 1 :
\begin{eqnarray}
C &\subseteq & { \mathfrak{U}}_{\overline{ \mathfrak{S}}}
\end{eqnarray}
\item Maximal compatibility context of Type 2 : 
\begin{eqnarray}
\exists \sigma,\sigma'\in \overline{ \mathfrak{S}}\smallsetminus \{\bot_{{}_{{ \mathfrak{S}}}}\},\sigma'\sqsupseteq_{{}_{\overline{ \mathfrak{S}}}}\sigma^\star  & \vert &{ \mathfrak{l}}_{(\sigma,\sigma')}\in C.
\end{eqnarray}
\end{itemize}
\end{definition}

\begin{lemma} \label{lemmacontext}
Let us consider a compatibility context denoted $C$ such that $C\subseteq { \mathfrak{U}}_{\overline{ \mathfrak{S}}}$. We have necessarily
\begin{eqnarray}
&&\exists\, \omega\in \overline{ \mathfrak{S}} \;\;\vert \;\; C\subseteq \left( \{\, { \mathfrak{l}}_{(\sigma,\centerdot)}\;\vert\; \sigma\sqsubseteq_{{}_{{ \mathfrak{S}}}} \omega\,\}\cup \{\, { \mathfrak{l}}_{(\centerdot,\sigma)}\;\vert\; \sigma\sqsubseteq_{{}_{{ \mathfrak{S}}}} \omega\,\}\cup \{\, { \mathfrak{l}}_{(\centerdot,\centerdot)}\,\}\right),\label{lemmacontext1}\;\;\;\;\;\;\;\;\;\;\;\;\;\;\;\;\;\;\;\;\\
&&\textit{\rm or} \nonumber\\
&&\exists\, \gamma,\gamma',\omega,\omega'\in \overline{ \mathfrak{S}} \;\;\vert \;\;\gamma' \sqsupseteq_{{}_{\overline{ \mathfrak{S}}}} \gamma^\star,\omega \sqsupseteq_{{}_{\overline{ \mathfrak{S}}}} \gamma, \omega' \sqsupseteq_{{}_{\overline{ \mathfrak{S}}}} \gamma'\nonumber\\
&&C\subseteq \left( \{\, { \mathfrak{l}}_{(\sigma,\centerdot)}\;\vert\; \gamma \sqsubseteq_{{}_{{ \mathfrak{S}}}} \sigma\sqsubseteq_{{}_{{ \mathfrak{S}}}} \omega\,\}\cup \{\, { \mathfrak{l}}_{(\centerdot,\sigma)}\;\vert\; \gamma \sqsubseteq_{{}_{{ \mathfrak{S}}}}\sigma\sqsubseteq_{{}_{{ \mathfrak{S}}}} \omega\,\}\cup \{\, { \mathfrak{l}}_{(\centerdot,\centerdot)}\,\}\cup \right.\nonumber\\
&&\hspace{2cm}\left.  \{\, { \mathfrak{l}}_{(\sigma,\centerdot)}\;\vert\; \gamma' \sqsubseteq_{{}_{{ \mathfrak{S}}}} \sigma\sqsubseteq_{{}_{{ \mathfrak{S}}}} \omega'\,\}\cup \{\, { \mathfrak{l}}_{(\centerdot,\sigma)}\;\vert\; \gamma' \sqsubseteq_{{}_{{ \mathfrak{S}}}}\sigma\sqsubseteq_{{}_{{ \mathfrak{S}}}} \omega'\,\}\right)\label{lemmacontext2}\;\;\;\;\;\;\;\;\;\;\;\;\;\;\;\;\;\;\;\;
\end{eqnarray}
\end{lemma}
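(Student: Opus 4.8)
The statement to prove is \autoref{lemmacontext}, which describes the structure of a compatibility context $C$ consisting entirely of effects in ${ \mathfrak{U}}_{\overline{ \mathfrak{S}}}$, i.e. effects of the form ${ \mathfrak{l}}_{(\sigma,\centerdot)}$, ${ \mathfrak{l}}_{(\centerdot,\sigma)}$, and ${ \mathfrak{l}}_{(\centerdot,\centerdot)}$. The plan is to first understand which such effects can appear together in a compatibility context — that is, which effects ${ \mathfrak{m}}_{ \mathfrak{l}}$ with ${ \mathfrak{l}}\in { \mathfrak{U}}_{\overline{ \mathfrak{S}}}$ are jointly compatible — and then deduce the shape of the maximal families. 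I would begin by unpacking the measurement morphism ${ \mathfrak{m}}_{{ \mathfrak{l}}_{(\sigma,\centerdot)}}$: by (\ref{defepsilonS}), for $\sigma''\in { \mathfrak{S}}$ this equals $\textbf{Y}$ if $\sigma\sqsubseteq_{{}_{{ \mathfrak{S}}}}\sigma''$ and $\bot$ otherwise — it is the measurement associated to the "principal-filter" effect at $\sigma$, which never outputs $\textbf{N}$. Similarly ${ \mathfrak{m}}_{{ \mathfrak{l}}_{(\centerdot,\sigma)}}=\overline{{ \mathfrak{m}}_{{ \mathfrak{l}}_{(\sigma,\centerdot)}}}$ outputs $\textbf{N}$ on $\uparrow^{{}^{ \mathfrak{S}}}\!\sigma$ and $\bot$ elsewhere.

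\textbf{Key steps.} The core of the argument is a characterization of joint compatibility for these "half-effects". I expect the relevant fact to be: two effects ${ \mathfrak{l}}_{(\sigma_1,\centerdot)}$ and ${ \mathfrak{l}}_{(\sigma_2,\centerdot)}$ are jointly compatible iff $\sigma_1$ and $\sigma_2$ are comparable in $\overline{ \mathfrak{S}}$ (or, more precisely, their infimum ${ \mathfrak{l}}_{(\sigma_1\sqcup_{{}_{ \mathfrak{S}}}\sigma_2,\centerdot)}$ must remain a \emph{real} effect, which forces $\sigma_1\sqcup_{{}_{ \mathfrak{S}}}\sigma_2\in\overline{ \mathfrak{S}}$, hence — given that the $\sigma_i$ are real — a relation that chains them). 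For mixed pairs ${ \mathfrak{l}}_{(\sigma_1,\centerdot)}$ and ${ \mathfrak{l}}_{(\centerdot,\sigma_2)}$, \autoref{genmeasurementconjugatecompatible} shows that passing to $\overline{\,{ \mathfrak{l}}\,}$ preserves compatibility, so a context closed under bar (\ref{contextbar}) and under $\sqcap_{{}_{{ \mathfrak{E}}_{ \mathfrak{S}}}}$ (\ref{contextwedge}) will, when it contains ${ \mathfrak{l}}_{(\sigma_1,\centerdot)}$ and ${ \mathfrak{l}}_{(\centerdot,\sigma_2)}$, also contain ${ \mathfrak{l}}_{(\sigma_1,\centerdot)}\sqcap_{{}_{{ \mathfrak{E}}_{ \mathfrak{S}}}}{ \mathfrak{l}}_{(\sigma_2,\centerdot)}={ \mathfrak{l}}_{(\sigma_1\sqcup_{{}_{ \mathfrak{S}}}\sigma_2,\centerdot)}$ (by (\ref{defcapES}), using the convention that $\centerdot$ is absorbed); the reality constraint (\ref{precisionrealmorphism}) from \autoref{genmeasurementwedgecompatible} then forces $\sigma_1\sqcup_{{}_{ \mathfrak{S}}}\sigma_2\in\overline{ \mathfrak{S}}$. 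Iterating over all pairs in $C$, I would show that $\{\sigma\in\overline{ \mathfrak{S}}\;\vert\;{ \mathfrak{l}}_{(\sigma,\centerdot)}\in C\text{ or }{ \mathfrak{l}}_{(\centerdot,\sigma)}\in C\}$ must generate, under $\sqcup_{{}_{ \mathfrak{S}}}$, a subset of $\overline{ \mathfrak{S}}$ that is totally ordered, or else split into at most two comparable chains related by the star: if some $\sigma$ occurs with its "yes-role" ${ \mathfrak{l}}_{(\sigma,\centerdot)}$ and some $\sigma'$ with its "no-role" ${ \mathfrak{l}}_{(\centerdot,\sigma')}$, then $\epsilon^{ \mathfrak{S}}$-values force (via the developed compatibility $\Psi$ taking values in ${ \mathfrak{B}}^{\widetilde{\otimes}N}$, a simplex by \autoref{SASBditribandcup}) that whenever both are "active" on a state the conjunction $\sigma\sqsubseteq\sigma''$ and $\sigma'\sqsubseteq\sigma''$ cannot produce $\textbf{Y}$ and $\textbf{N}$ simultaneously — giving exactly the constraint $\gamma'\sqsupseteq_{{}_{\overline{ \mathfrak{S}}}}\gamma^\star$ in case (\ref{lemmacontext2}). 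The two cases (\ref{lemmacontext1}) and (\ref{lemmacontext2}) correspond to whether the "yes-filter" states and the "no-filter" states occupy a single up-set (generated by some $\omega$) or two nested up-sets (generated by $\omega,\omega'$ above the chain-bottoms $\gamma,\gamma'$).

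\textbf{Main obstacle.} The delicate point, and where I expect to spend most effort, is ruling out that a Type-1 context can contain three or more "incomparable directions" — that is, proving that the generated structure collapses to at most the two-chain configuration of (\ref{lemmacontext2}). This requires carefully exploiting that the joint morphism $\Psi$ lands in the simplex ${ \mathfrak{B}}^{\widetilde{\otimes}N}$: in a simplex every state has a \emph{unique} decomposition into pure states (\autoref{lemmaUsimplex}), and the pure states of ${ \mathfrak{B}}^{\widetilde{\otimes}N}$ are the "deterministic assignments". The constraint that all the ${ \mathfrak{m}}_{ \mathfrak{l}}$, ${ \mathfrak{l}}\in C$, factor as partial traces of a single such $\Psi$ forces the pattern of $\textbf{Y}/\textbf{N}/\bot$ outputs across $\overline{ \mathfrak{S}}^{{}^{pure}}$ to be "consistent" — and a bit of combinatorics on these patterns shows that two incomparable up-sets of $\overline{ \mathfrak{S}}$ already saturate the available room. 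Once that combinatorial lemma is in hand, maximality of $C$ (via the covering closure (\ref{contexteta})–(\ref{contextwedge})) pins down $\omega$ (resp. $\omega,\omega'$) as the supremum of the relevant states, and one reads off (\ref{lemmacontext1}) or (\ref{lemmacontext2}) directly. I would also double-check the edge case where the "no-part" is empty, which degenerates (\ref{lemmacontext2}) into (\ref{lemmacontext1}), and the case $\gamma=\gamma'=\bot_{{}_{ \mathfrak{S}}}$, handled separately as in the closing remarks on contexts.
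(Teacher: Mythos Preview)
Your overall direction—using the reality constraint (\ref{precisionrealmorphism}) to limit which $\sigma$'s can appear—is correct, but your characterization of the resulting structure is off in a way that steers you toward the wrong decomposition. When $\mathfrak{l}_{(\sigma_1,\centerdot)},\mathfrak{l}_{(\sigma_2,\centerdot)}\in C$, the reality of their infimum does \emph{not} force $\sigma_1,\sigma_2$ to be comparable (a ``chain''); it forces their supremum in $\mathfrak{S}$, when it exists, to lie in $\overline{\mathfrak{S}}$. So your picture of one or two ``chains'' should be replaced by one or two subsets each admitting a common upper bound in $\overline{\mathfrak{S}}$—which is exactly the shape of (\ref{lemmacontext1}) and (\ref{lemmacontext2}).

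More importantly, your ``main obstacle''—ruling out three or more incomparable directions via combinatorics in the simplex $\mathfrak{B}^{\widetilde{\otimes}N}$—is not needed, and the paper avoids it entirely. The paper's argument instead takes $V\subseteq C$ maximal for inclusion among those subsets satisfying
\[
\forall\, \sigma,\sigma'\ \text{with}\ \mathfrak{l}_{(\sigma,\centerdot)},\mathfrak{l}_{(\sigma',\centerdot)}\in V,\qquad \sigma'\not\sqsupseteq_{{}_{\overline{\mathfrak{S}}}}\sigma^\star.
\]
The reality constraint then gives directly that $\omega:=\bigsqcup{}^{{}^{\mathfrak{S}}}\{\sigma\mid\mathfrak{l}_{(\sigma,\centerdot)}\in V\}$ exists in $\overline{\mathfrak{S}}$, and $C_\omega$ (the right-hand side of (\ref{lemmacontext1})) is a compatibility context containing $V$. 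If $C\smallsetminus V=\varnothing$ you are in case (\ref{lemmacontext1}). Otherwise, for any $\alpha$ with $\mathfrak{l}_{(\alpha,\centerdot)}\in C\smallsetminus V$, maximality of $V$ together with the reality constraint forces $\alpha^\star\sqsubseteq_{{}_{\overline{\mathfrak{S}}}}\sigma$ for \emph{every} such $\sigma$ in $V$; setting $\gamma:=\bigsqcap\{\sigma\mid\mathfrak{l}_{(\sigma,\centerdot)}\in V\}$ and $\gamma':=\bigsqcap\{\sigma\mid\mathfrak{l}_{(\sigma,\centerdot)}\in C\smallsetminus V\}$ then yields $\gamma'\sqsupseteq_{{}_{\overline{\mathfrak{S}}}}\gamma^\star$, and the same reasoning applied to $C\smallsetminus V$ produces $\omega'$, giving case (\ref{lemmacontext2}). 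The dichotomy is thus purely structural—does $C$ contain a $\star$-related pair or not—and no analysis of $\mathfrak{B}^{\widetilde{\otimes}N}$ is required.
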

\begin{proof}
Let us consider a compatibility context denoted $C$ such that $C\subseteq { \mathfrak{U}}_{\overline{ \mathfrak{S}}}$ and let us consider the following subset of ${ \mathcal{P}}(\overline{ \mathfrak{S}})$
\begin{eqnarray}
{\mathfrak{Q}}(\overline{ \mathfrak{S}})&:=&\{\,U\subseteq C \;\vert\; \forall \sigma,\sigma'\in \overline{ \mathfrak{S}}\;\;\textit{\rm s.t.}\;\; { \mathfrak{l}}_{(\sigma,\centerdot)},{ \mathfrak{l}}_{(\sigma',\centerdot)}\in U\;\;\textit{\rm we have}\;\;\sigma'\not\sqsupseteq_{{}_{{ \mathfrak{S}}'}}\sigma^\star\,\}\;\;\;\;\;\;\;\;\;\;\;\;   
\end{eqnarray} 
and let us consider an element $V$ which is maximal for inclusion in ${\mathfrak{Q}}(\overline{ \mathfrak{S}})$.  As a trivial consequence of the requirement (\ref{relationsjointchannelreal}) (more explicitly, of the property (\ref{precisionrealmorphism})), we know that $\omega:=\bigsqcup{}^{{}^{{ \mathfrak{S}}}}\{\sigma\;\vert\; { \mathfrak{l}}_{(\sigma,\centerdot)}\in V\}$ exists in $\overline{ \mathfrak{S}}$.\\
If we denote $C_\omega :=\left( \{\, { \mathfrak{l}}_{(\sigma,\centerdot)}\;\vert\; \sigma\sqsubseteq_{{}_{{ \mathfrak{S}}}} \omega\,\}\cup \{\, { \mathfrak{l}}_{(\centerdot,\sigma)}\;\vert\; \sigma\sqsubseteq_{{}_{{ \mathfrak{S}}}} \omega\,\}\cup \{\, { \mathfrak{l}}_{(\centerdot,\centerdot)}\,\}\right)$, we note that $C_\omega$ is a compatibility context (the joint morphism is built straightforwardly) and $V\subseteq C_\omega$ by construction.\\
We will now distinguish two cases : (1) $C\smallsetminus V=\varnothing$, and (2) $C\smallsetminus V\not= \varnothing$. \\
The first case is treated immediately, we obtain (\ref{lemmacontext1}). \\
Let us then focus on the second case.  For any $\alpha\in \overline{ \mathfrak{S}}\smallsetminus \{\bot_{{}_{\overline{ \mathfrak{S}}}}\}$ such that ${ \mathfrak{l}}_{(\alpha,\centerdot)} \in C\smallsetminus V$, we have necessarily, due to the requirement (\ref{relationsjointchannelreal}) (more explicitly, of the property (\ref{precisionrealmorphism})), for any $\sigma\in  \overline{ \mathfrak{S}}\smallsetminus \{\bot_{{}_{\overline{ \mathfrak{S}}}}\}$ such that ${ \mathfrak{l}}_{(\sigma,\centerdot)} \in V$ the property $\alpha^\star \sqsubseteq_{{}_{\overline{ \mathfrak{S}}}}\sigma$, i.e. $\alpha^\star \sqsubseteq_{{}_{\overline{ \mathfrak{S}}}} \bigsqcap{}^{{}^{{ \mathfrak{S}}}}\{\, \sigma\in  \overline{ \mathfrak{S}}\smallsetminus \{\bot_{{}_{\overline{ \mathfrak{S}}}}\}\;\vert\;{ \mathfrak{l}}_{(\sigma,\centerdot)} \in V\,\}$. If we denote by $\gamma :=\bigsqcap{}^{{}^{{ \mathfrak{S}}}}\{\, \sigma\in  \overline{ \mathfrak{S}}\smallsetminus \{\bot_{{}_{\overline{ \mathfrak{S}}}}\}\;\vert\;{ \mathfrak{l}}_{(\sigma,\centerdot)} \in V\,\}$ and $\gamma':=\bigsqcap{}^{{}^{{ \mathfrak{S}}}}\{\, \sigma\in  \overline{ \mathfrak{S}}\smallsetminus \{\bot_{{}_{\overline{ \mathfrak{S}}}}\}\;\vert\;{ \mathfrak{l}}_{(\sigma,\centerdot)} \in C\smallsetminus V\,\}$, we have $\gamma' \sqsupseteq_{{}_{\overline{ \mathfrak{S}}}} \gamma^\star$. On another part, and for the same reasons as before,  if we denote $\omega':=\bigsqcup{}^{{}^{{ \mathfrak{S}}}}\{\, \sigma\in  \overline{ \mathfrak{S}}\smallsetminus \{\bot_{{}_{\overline{ \mathfrak{S}}}}\}\;\vert\;{ \mathfrak{l}}_{(\sigma,\centerdot)} \in C\smallsetminus V\,\}$, then $C_{\omega'}$ is a compatibility context (the joint morphism is built straightforwardly) and $(C\smallsetminus V)\subseteq C_{\omega'}$ by construction.
\\
This concludes the proof.
\end{proof}

\begin{lemma}\label{lemmacontexttype1}
In the following, we will adopt the following notation
\begin{eqnarray}
&&\forall \omega\in \overline{ \mathfrak{S}}^{{}^{pure}} \;\;\vert \;\; C^{(1)}_\omega := \left( \{\, { \mathfrak{l}}_{(\sigma,\centerdot)}\;\vert\; \sigma\sqsubseteq_{{}_{{ \mathfrak{S}}}} \omega\,\}\cup \{\, { \mathfrak{l}}_{(\centerdot,\sigma)}\;\vert\; \sigma\sqsubseteq_{{}_{{ \mathfrak{S}}}} \omega\,\}\cup \{\, { \mathfrak{l}}_{(\centerdot,\centerdot)}\,\}\right).\;\;\;\;\;\;\;\;\;\;\;\;\;\;
\end{eqnarray}
Let $C$ be a maximal compatibility context of type 1. Then, 
\begin{eqnarray}
&&\existunique\, \omega\in \overline{ \mathfrak{S}}^{{}^{pure}} \;\;\vert \;\; C= C^{(1)}_\omega.
\end{eqnarray}
\end{lemma}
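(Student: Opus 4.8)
The plan is to show existence and uniqueness separately, using Lemma \ref{lemmacontext} as the starting point. Since $C$ is a maximal compatibility context of Type 1, we have $C \subseteq { \mathfrak{U}}_{\overline{ \mathfrak{S}}}$, so Lemma \ref{lemmacontext} applies and gives us one of the two alternatives (\ref{lemmacontext1}) or (\ref{lemmacontext2}). The first step is to argue that alternative (\ref{lemmacontext2}) cannot occur for a \emph{maximal} Type 1 context when $\overline{ \mathfrak{S}}$ is nontrivial: if $C$ were contained in the union of two ``cones'' based at $\omega$ and $\omega'$ with $\gamma' \sqsupseteq_{{}_{\overline{ \mathfrak{S}}}} \gamma^\star$, then I would observe that $C$ is strictly contained in the context $C_\omega \cup C_{\omega'}$ built from these parameters, or — more carefully — that one can enlarge $C$ by adding all effects ${ \mathfrak{l}}_{(\sigma,\centerdot)}$ with $\sigma \sqsubseteq_{{}_{{ \mathfrak{S}}}} \omega$, contradicting maximality, \emph{unless} the two cones coincide and the picture degenerates to the single-cone case (\ref{lemmacontext1}). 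Alternatively, I expect the cleaner route is: whatever alternative Lemma \ref{lemmacontext} yields, $C$ is contained in some $C^{(1)}_{\omega''}$ for a suitable real state $\omega''$ (taking $\omega'' := \omega \sqcup_{{}_{{ \mathfrak{S}}}} \omega'$ when the join exists, or handling the obstruction directly from (\ref{precisionrealmorphism})); and $C^{(1)}_{\omega''}$ is itself a compatibility context because the joint morphism is constructed explicitly (exactly as indicated in the proof of Lemma \ref{lemmacontext}). By maximality of $C$ we then get $C = C^{(1)}_{\omega''}$.

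\textbf{Reducing $\omega$ to a pure state.} Having written $C = C^{(1)}_{\omega''}$ for some $\omega'' \in \overline{ \mathfrak{S}}$, the next step is to show $\omega''$ may be taken in $\overline{ \mathfrak{S}}^{{}^{pure}}$. Here I would use that $\overline{ \mathfrak{S}}$ is generated by its maximal elements (property (\ref{completemeetirreducibleordergeneratingS})): write $\omega'' = \bigsqcap{}^{{}^{\overline{ \mathfrak{S}}}} \underline{\omega''}_{{}_{\overline{ \mathfrak{S}}}}$ where $\underline{\omega''}_{{}_{\overline{ \mathfrak{S}}}}$ is the set of pure states above $\omega''$. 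If $\underline{\omega''}_{{}_{\overline{ \mathfrak{S}}}}$ contained two distinct pure states $\omega_1, \omega_2$, then $C^{(1)}_{\omega''} \subseteq C^{(1)}_{\omega_1}$ strictly (since $\omega'' \sqsubset_{{}_{{ \mathfrak{S}}}} \omega_1$ gives a strictly larger lower set, hence strictly more effects ${ \mathfrak{l}}_{(\sigma,\centerdot)}$), and $C^{(1)}_{\omega_1}$ is a compatibility context of Type 1, contradicting maximality of $C$. Therefore $\underline{\omega''}_{{}_{\overline{ \mathfrak{S}}}}$ is a singleton $\{\omega\}$ with $\omega$ pure, and in fact $\omega'' = \omega$. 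This yields $C = C^{(1)}_\omega$ with $\omega \in \overline{ \mathfrak{S}}^{{}^{pure}}$, establishing existence.

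\textbf{Uniqueness.} For uniqueness, suppose $C^{(1)}_{\omega_1} = C^{(1)}_{\omega_2}$ with $\omega_1, \omega_2 \in \overline{ \mathfrak{S}}^{{}^{pure}}$. Since ${ \mathfrak{l}}_{(\omega_1,\centerdot)} \in C^{(1)}_{\omega_1} = C^{(1)}_{\omega_2}$, the defining description of $C^{(1)}_{\omega_2}$ forces $\omega_1 \sqsubseteq_{{}_{{ \mathfrak{S}}}} \omega_2$; symmetrically $\omega_2 \sqsubseteq_{{}_{{ \mathfrak{S}}}} \omega_1$; by antisymmetry of $\sqsubseteq_{{}_{{ \mathfrak{S}}}}$, $\omega_1 = \omega_2$. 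This is essentially immediate once the existence part is in place.

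\textbf{Main obstacle.} I expect the delicate point to be ruling out (or correctly absorbing) alternative (\ref{lemmacontext2}) from Lemma \ref{lemmacontext} while preserving the Type 1 hypothesis — one must check that the ``two-cone'' configuration, when the context is genuinely maximal, either collapses to a single cone or violates maximality, and this uses the real-morphism constraint (\ref{precisionrealmorphism}) (that $\bigsqcup{}^{{}^{{ \mathfrak{S}}}}$ of the relevant real states lies in $\overline{ \mathfrak{S}}$) together with the incompatibility condition $\neg \widehat{\sigma\sigma^\star}{}^{{}^{{ \mathfrak{S}}}}$ from the real structure. Everything else is routine order-theoretic bookkeeping with the explicit description of $C^{(1)}_\omega$ and the generation of $\overline{ \mathfrak{S}}$ by its pure states.
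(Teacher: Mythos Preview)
Your approach is the same as the paper's --- both reduce directly to Lemma~\ref{lemmacontext} --- and the paper's proof is in fact just the single line ``Trivial consequence of Lemma~\ref{lemmacontext}'', so your write-up is considerably more detailed than what the paper offers. Your reduction-to-pure-states step (via (\ref{completemeetirreducibleordergeneratingS}) and maximality) and your uniqueness argument are both correct and cleanly stated.

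One point deserves sharpening. In handling alternative (\ref{lemmacontext2}) you propose taking $\omega'' := \omega \sqcup_{{}_{{ \mathfrak{S}}}} \omega'$; but this join \emph{never} exists in that case, since $\gamma \sqsubseteq_{{}_{\overline{ \mathfrak{S}}}} \omega$ and $\gamma^\star \sqsubseteq_{{}_{\overline{ \mathfrak{S}}}} \gamma' \sqsubseteq_{{}_{\overline{ \mathfrak{S}}}} \omega'$ together with (\ref{starcomplement}) force $\neg\,\widehat{\omega\omega'}{}^{{}^{{ \mathfrak{S}}}}$. So the ``cleaner route'' you sketch cannot absorb case (\ref{lemmacontext2}) into case (\ref{lemmacontext1}). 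The correct resolution (which the paper leaves implicit under ``trivial'') is rather that a context satisfying (\ref{lemmacontext2}) admits a strict extension to a Type~2 context: since $\gamma' \sqsupseteq_{{}_{\overline{ \mathfrak{S}}}} \gamma^\star$, the effect ${ \mathfrak{l}}_{(\gamma,\gamma')}$ is real, and the two-cone family together with ${ \mathfrak{l}}_{(\gamma,\gamma')}$ forms a compatibility context (the joint morphism is built just as for $C_\omega$ in the proof of Lemma~\ref{lemmacontext}, now using both branches). This contradicts maximality of $C$ among \emph{all} compatibility contexts, not just Type~1 ones --- which is exactly what membership in $\mathcal{C}$ means. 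You correctly flagged this as the main obstacle; the missing ingredient is that the obstruction is resolved by \emph{extending outside} $\mathfrak{U}_{\overline{\mathfrak{S}}}$ rather than by staying inside it.
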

\begin{proof}
Trivial consequence of Lemma \ref{lemmacontext}.
\end{proof}

\begin{lemma}\label{lemmacontexttype2a}
Let $C$ be a maximal compatibility context of type 2 and let $\gamma,\gamma'\in \overline{ \mathfrak{S}} \smallsetminus \{\bot_{{}_{\overline{ \mathfrak{S}}}}\}$ with $\gamma'\sqsupseteq_{{}_{\overline{ \mathfrak{S}}}}\gamma^\star$ and such that ${ \mathfrak{l}}_{(\gamma,\gamma')}\in C$. Then, 
\begin{eqnarray}
&&\exists\, \lambda,\lambda',\omega,\omega'\in \overline{ \mathfrak{S}} \smallsetminus \{\bot_{{}_{\overline{ \mathfrak{S}}}}\} \;\;\vert \;\;\lambda' \sqsupseteq_{{}_{\overline{ \mathfrak{S}}}} \lambda^\star,\omega \sqsupseteq_{{}_{\overline{ \mathfrak{S}}}} \gamma \sqsupseteq_{{}_{\overline{ \mathfrak{S}}}} \lambda, \omega' \sqsupseteq_{{}_{\overline{ \mathfrak{S}}}} \gamma' \sqsupseteq_{{}_{\overline{ \mathfrak{S}}}} \lambda'\nonumber\\
&&(C\cap { \mathfrak{U}}_{\overline{ \mathfrak{S}}})\subseteq \left( \{\, { \mathfrak{l}}_{(\sigma,\centerdot)}\;\vert\; \lambda \sqsubseteq_{{}_{{ \mathfrak{S}}}} \sigma\sqsubseteq_{{}_{{ \mathfrak{S}}}} \omega\,\}\cup \{\, { \mathfrak{l}}_{(\centerdot,\sigma)}\;\vert\; \lambda \sqsubseteq_{{}_{{ \mathfrak{S}}}}\sigma\sqsubseteq_{{}_{{ \mathfrak{S}}}} \omega\,\}\cup \{\, { \mathfrak{l}}_{(\centerdot,\centerdot)}\,\}\cup \right.\nonumber\\
&&\hspace{1cm}\left.  \{\, { \mathfrak{l}}_{(\sigma,\centerdot)}\;\vert\; \lambda' \sqsubseteq_{{}_{{ \mathfrak{S}}}} \sigma\sqsubseteq_{{}_{{ \mathfrak{S}}}} \omega'\,\}\cup \{\, { \mathfrak{l}}_{(\centerdot,\sigma)}\;\vert\; \lambda' \sqsubseteq_{{}_{{ \mathfrak{S}}}}\sigma\sqsubseteq_{{}_{{ \mathfrak{S}}}} \omega'\,\}\right)
\end{eqnarray}
\end{lemma}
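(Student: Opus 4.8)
The statement extends Lemma \ref{lemmacontext} (the structural description of contexts contained in ${ \mathfrak{U}}_{\overline{ \mathfrak{S}}}$) to a \emph{maximal} context of type 2 by looking at what its \emph{trivial part} $C\cap { \mathfrak{U}}_{\overline{ \mathfrak{S}}}$ must look like once we know $C$ already contains a genuine real effect ${ \mathfrak{l}}_{(\gamma,\gamma')}$ with $\gamma'\sqsupseteq_{{}_{\overline{ \mathfrak{S}}}}\gamma^\star$. The natural approach is: first apply Lemma \ref{lemmacontext} to the subset $C\cap { \mathfrak{U}}_{\overline{ \mathfrak{S}}}$ of ${ \mathfrak{U}}_{\overline{ \mathfrak{S}}}$ (which is itself a compatibility context, being a subset of the compatibility context $C$, since any subfamily of jointly compatible measurements is jointly compatible — here one restricts the joint morphism for $C$ to the relevant subset). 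This gives either the one-block description \eqref{lemmacontext1} or the two-block description \eqref{lemmacontext2}; in the statement to be proved we simply rename the parameters so that the two blocks are bounded below by $\lambda,\lambda'$ and above by $\omega,\omega'$, with $\lambda'\sqsupseteq_{{}_{\overline{ \mathfrak{S}}}}\lambda^\star$. (The one-block case is absorbed by allowing $\lambda=\lambda'$ degenerate, or one treats it separately and checks it never occurs because ${ \mathfrak{l}}_{(\gamma,\gamma')}\in C$ forces the presence of two distinct blocks.)

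\textbf{Key steps.} First I would observe that $C\cap { \mathfrak{U}}_{\overline{ \mathfrak{S}}}$ is a compatibility context and invoke Lemma \ref{lemmacontext} to fix parameters $\lambda,\lambda',\omega,\omega'\in\overline{ \mathfrak{S}}\smallsetminus\{\bot_{{}_{\overline{ \mathfrak{S}}}}\}$ with $\lambda'\sqsupseteq_{{}_{\overline{ \mathfrak{S}}}}\lambda^\star$, $\omega\sqsupseteq_{{}_{\overline{ \mathfrak{S}}}}\lambda$, $\omega'\sqsupseteq_{{}_{\overline{ \mathfrak{S}}}}\lambda'$, so that $C\cap { \mathfrak{U}}_{\overline{ \mathfrak{S}}}$ is contained in the union of the four displayed families. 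Second, I would show the compatibility relations between ${ \mathfrak{l}}_{(\gamma,\gamma')}$ and the trivial effects ${ \mathfrak{l}}_{(\sigma,\centerdot)},{ \mathfrak{l}}_{(\centerdot,\sigma)}$ lying in $C$: using Lemma \ref{lemma2compatiblemeasurements} and Lemma \ref{genmeasurementwedgecompatible} (or rather the property \eqref{precisionrealmorphism} extracted from \eqref{relationsjointchannelreal}), if ${ \mathfrak{l}}_{(\gamma,\gamma')}$ and ${ \mathfrak{l}}_{(\sigma,\centerdot)}$ are jointly compatible, then ${ \mathfrak{l}}_{(\gamma,\gamma')}\sqcap_{{}_{{ \mathfrak{E}}_{ \mathfrak{S}}}}{ \mathfrak{l}}_{(\sigma,\centerdot)}={ \mathfrak{l}}_{(\gamma\sqcup_{{}_{{ \mathfrak{S}}}}\sigma,\gamma')}$ (when $\widehat{\gamma\sigma}{}^{{}^{ \mathfrak{S}}}$) must be a real effect, forcing $\gamma'\sqsupseteq_{{}_{\overline{ \mathfrak{S}}}}(\gamma\sqcup_{{}_{{ \mathfrak{S}}}}\sigma)^\star$, hence $\sigma^\star\sqsupseteq_{{}_{\overline{ \mathfrak{S}}}}\ldots$; and if $\neg\widehat{\gamma\sigma}{}^{{}^{ \mathfrak{S}}}$ the wedge collapses, again constraining $\sigma$. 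Third, I would sort the ${ \mathfrak{l}}_{(\sigma,\centerdot)}\in C\cap { \mathfrak{U}}_{\overline{ \mathfrak{S}}}$ into those with $\sigma\sqsupseteq_{{}_{\overline{ \mathfrak{S}}}}\gamma$ (first block, so the natural lower bound can be taken to be $\lambda$ with $\omega\sqsupseteq_{{}_{\overline{ \mathfrak{S}}}}\gamma\sqsupseteq_{{}_{\overline{ \mathfrak{S}}}}\lambda$) and those with $\sigma\sqsupseteq_{{}_{\overline{ \mathfrak{S}}}}\gamma'$ (second block, $\omega'\sqsupseteq_{{}_{\overline{ \mathfrak{S}}}}\gamma'\sqsupseteq_{{}_{\overline{ \mathfrak{S}}}}\lambda'$), showing these are the only two possibilities and that the bounds $\omega\sqsupseteq_{{}_{\overline{ \mathfrak{S}}}}\gamma$ and $\omega'\sqsupseteq_{{}_{\overline{ \mathfrak{S}}}}\gamma'$ can be arranged by, if necessary, enlarging $\omega,\omega'$ to contain $\gamma,\gamma'$ (using that ${ \mathfrak{l}}_{(\gamma,\centerdot)}$ and ${ \mathfrak{l}}_{(\centerdot,\gamma')}$ are dominated by / compatible with ${ \mathfrak{l}}_{(\gamma,\gamma')}$, hence automatically in any maximal context containing it). Finally I would conclude that the trivial part $C\cap { \mathfrak{U}}_{\overline{ \mathfrak{S}}}$ sits inside exactly the displayed union.

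\textbf{Main obstacle.} The delicate point is the correct bookkeeping between the four "rename" parameters and the given $\gamma,\gamma'$: one must verify that every trivial effect of $C$ falls into one of the two blocks (and not, say, straddles both or lives outside), and that the containments $\omega\sqsupseteq_{{}_{\overline{ \mathfrak{S}}}}\gamma\sqsupseteq_{{}_{\overline{ \mathfrak{S}}}}\lambda$ and $\omega'\sqsupseteq_{{}_{\overline{ \mathfrak{S}}}}\gamma'\sqsupseteq_{{}_{\overline{ \mathfrak{S}}}}\lambda'$ hold with the $\gamma,\gamma'$ we started from rather than with some a priori unrelated pair produced by Lemma \ref{lemmacontext}. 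This is handled by noting that ${ \mathfrak{l}}_{(\gamma,\gamma')}\in C$ forces ${ \mathfrak{l}}_{(\gamma,\centerdot)},{ \mathfrak{l}}_{(\gamma',\centerdot)}\in C\cap { \mathfrak{U}}_{\overline{ \mathfrak{S}}}$ (since ${ \mathfrak{l}}_{(\gamma,\gamma')}\sqcap_{{}_{{ \mathfrak{E}}_{ \mathfrak{S}}}}{ \mathfrak{Y}}_{{}_{{ \mathfrak{E}}_{ \mathfrak{S}}}}={ \mathfrak{l}}_{(\gamma,\centerdot)}$ and similarly with $\overline{{ \mathfrak{l}}_{(\gamma,\gamma')}}={ \mathfrak{l}}_{(\gamma',\gamma)}$, and $C$ is closed under $\sqcap_{{}_{{ \mathfrak{E}}_{ \mathfrak{S}}}}$ and $\overline{\;\cdot\;}$ by Theorem \ref{synthesiscontexts}), so $\gamma$ and $\gamma'$ are already "lower anchors" of the two blocks, whence one may take $\lambda\sqsubseteq_{{}_{\overline{ \mathfrak{S}}}}\gamma$, $\lambda'\sqsubseteq_{{}_{\overline{ \mathfrak{S}}}}\gamma'$ and the block upper bounds $\omega,\omega'$ dominate $\gamma,\gamma'$ respectively. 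Everything else is a routine translation of Lemma \ref{lemmacontext} combined with the context-closure properties of Theorem \ref{synthesiscontexts} and the real-effect constraint \eqref{precisionrealmorphism}.
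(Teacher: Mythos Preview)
Your proposal is correct and follows essentially the same route as the paper's proof: observe that $C\cap{ \mathfrak{U}}_{\overline{ \mathfrak{S}}}$ is itself a compatibility context, apply Lemma~\ref{lemmacontext}, use Theorem~\ref{synthesiscontexts} to show ${ \mathfrak{l}}_{(\gamma,\centerdot)},{ \mathfrak{l}}_{(\gamma',\centerdot)}\in C\cap{ \mathfrak{U}}_{\overline{ \mathfrak{S}}}$ (via the formulas ${ \mathfrak{l}}_{(\gamma,\centerdot)}={ \mathfrak{l}}_{(\gamma,\gamma')}\sqcap_{{}_{{ \mathfrak{E}}_{ \mathfrak{S}}}}{ \mathfrak{Y}}_{{ \mathfrak{E}}_{ \mathfrak{S}}}$ and ${ \mathfrak{l}}_{(\gamma',\centerdot)}=\overline{{ \mathfrak{l}}_{(\gamma,\gamma')}\sqcap_{{}_{{ \mathfrak{E}}_{ \mathfrak{S}}}}\overline{{ \mathfrak{Y}}_{{ \mathfrak{E}}_{ \mathfrak{S}}}}}$), note $\neg\,\widehat{\gamma\gamma'}{}^{{}^{\overline{ \mathfrak{S}}}}$ forces the two-block case~\eqref{lemmacontext2}, and conclude that $\gamma,\gamma'$ must land in separate blocks, giving $\lambda\sqsubseteq_{{}_{\overline{ \mathfrak{S}}}}\gamma\sqsubseteq_{{}_{\overline{ \mathfrak{S}}}}\omega$ and $\lambda'\sqsubseteq_{{}_{\overline{ \mathfrak{S}}}}\gamma'\sqsubseteq_{{}_{\overline{ \mathfrak{S}}}}\omega'$. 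Your ``Main obstacle'' paragraph is exactly the argument the paper gives; your ``Second'' and ``Third'' steps (analyzing ${ \mathfrak{l}}_{(\gamma,\gamma')}\sqcap_{{}_{{ \mathfrak{E}}_{ \mathfrak{S}}}}{ \mathfrak{l}}_{(\sigma,\centerdot)}$ for arbitrary $\sigma$, and sorting effects by whether $\sigma\sqsupseteq_{{}_{\overline{ \mathfrak{S}}}}\gamma$) are more elaborate than needed---once Lemma~\ref{lemmacontext} has delivered the two blocks, you only need to place $\gamma$ and $\gamma'$ within them, not re-derive the block structure from scratch.
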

\begin{proof}
Let $C$ be a maximal compatibility context of type 2 and let $\gamma,\gamma'\in \overline{ \mathfrak{S}} \smallsetminus \{\bot_{{}_{\overline{ \mathfrak{S}}}}\}$ with $\gamma'\sqsupseteq_{{}_{\overline{ \mathfrak{S}}}}\gamma^\star$ and such that ${ \mathfrak{l}}_{(\gamma,\gamma')}\in C$. \\
Let us first note that $(C\cap { \mathfrak{U}}_{\overline{ \mathfrak{S}}})$ is a compatibility context included in ${ \mathfrak{U}}_{\overline{ \mathfrak{S}}}$. \\
Secondly, we remark that ${ \mathfrak{l}}_{(\gamma,\centerdot)}, { \mathfrak{l}}_{(\gamma',\centerdot)}\in (C\cap { \mathfrak{U}}_{\overline{ \mathfrak{S}}})$.  Indeed, we recall that ${ \mathfrak{l}}_{(\gamma,\centerdot)}={ \mathfrak{l}}_{(\gamma,\gamma')} \sqcap_{{}_{\overline{ \mathfrak{S}}}} {\mathfrak{Y}}_{{\mathfrak{E}}_{\mathfrak{S}}}$ and ${ \mathfrak{l}}_{(\gamma',\centerdot)}=\overline{ { \mathfrak{l}}_{(\gamma,\gamma')} \sqcap_{{}_{\overline{ \mathfrak{S}}}} \overline{{\mathfrak{Y}}_{{\mathfrak{E}}_{\mathfrak{S}}}}}$ and we have Theorem \ref{synthesiscontexts} for the maximal compatibility context $C$. Moreover, we have $\neg\widehat{\gamma\gamma'}{}^{{}^{\overline{ \mathfrak{S}}}}$. We are then in the case (\ref{lemmacontext2}) and not in the case (\ref{lemmacontext1}). Because of Lemma \ref{lemmacontext}, we know that 
\begin{eqnarray}
&&\exists\, \lambda,\lambda',\omega,\omega'\in \overline{ \mathfrak{S}} \smallsetminus \{\bot_{{}_{\overline{ \mathfrak{S}}}}\} \;\;\vert \;\;\lambda' \sqsupseteq_{{}_{\overline{ \mathfrak{S}}}} \lambda^\star,\omega \sqsupseteq_{{}_{\overline{ \mathfrak{S}}}} \lambda, \omega' \sqsupseteq_{{}_{\overline{ \mathfrak{S}}}} \lambda'\nonumber\\
&&(C\cap { \mathfrak{U}}_{\overline{ \mathfrak{S}}})\subseteq \left( \{\, { \mathfrak{l}}_{(\sigma,\centerdot)}\;\vert\; \lambda \sqsubseteq_{{}_{{ \mathfrak{S}}}} \sigma\sqsubseteq_{{}_{{ \mathfrak{S}}}} \omega\,\}\cup \{\, { \mathfrak{l}}_{(\centerdot,\sigma)}\;\vert\; \lambda \sqsubseteq_{{}_{{ \mathfrak{S}}}}\sigma\sqsubseteq_{{}_{{ \mathfrak{S}}}} \omega\,\}\cup \{\, { \mathfrak{l}}_{(\centerdot,\centerdot)}\,\}\cup \right.\nonumber\\
&&\hspace{1cm}\left.  \{\, { \mathfrak{l}}_{(\sigma,\centerdot)}\;\vert\; \lambda' \sqsubseteq_{{}_{{ \mathfrak{S}}}} \sigma\sqsubseteq_{{}_{{ \mathfrak{S}}}} \omega'\,\}\cup \{\, { \mathfrak{l}}_{(\centerdot,\sigma)}\;\vert\; \lambda' \sqsubseteq_{{}_{{ \mathfrak{S}}}}\sigma\sqsubseteq_{{}_{{ \mathfrak{S}}}} \omega'\,\}\right)
\end{eqnarray}
and we must obviously have $\lambda \sqsubseteq_{{}_{\overline{ \mathfrak{S}}}} \gamma$ and $\lambda' \sqsubseteq_{{}_{\overline{ \mathfrak{S}}}} \gamma'$, because ${ \mathfrak{l}}_{(\gamma,\centerdot)}, { \mathfrak{l}}_{(\gamma',\centerdot)} \in C$.
\end{proof}

\begin{lemma}\label{lemmacontexttype2b}
Let $C$ be a maximal compatibility context of type 2 such that there exists $\gamma\in \overline{ \mathfrak{S}} \smallsetminus \{\bot_{{}_{\overline{ \mathfrak{S}}}}\}$ with ${ \mathfrak{l}}_{(\gamma,\gamma^\star)}\in C$. Then, 
\begin{eqnarray}
&&\exists\, \omega,\omega'\in \overline{ \mathfrak{S}} \smallsetminus \{\bot_{{}_{\overline{ \mathfrak{S}}}}\} \;\;\vert \;\;\omega \sqsupseteq_{{}_{\overline{ \mathfrak{S}}}} \gamma, \omega' \sqsupseteq_{{}_{\overline{ \mathfrak{S}}}} \gamma^\star\nonumber\\
&&(C\cap { \mathfrak{U}}_{\overline{ \mathfrak{S}}})\subseteq \left( \{\, { \mathfrak{l}}_{(\sigma,\centerdot)}\;\vert\; \gamma \sqsubseteq_{{}_{{ \mathfrak{S}}}} \sigma\sqsubseteq_{{}_{{ \mathfrak{S}}}} \omega\,\}\cup \{\, { \mathfrak{l}}_{(\centerdot,\sigma)}\;\vert\; \gamma \sqsubseteq_{{}_{{ \mathfrak{S}}}}\sigma\sqsubseteq_{{}_{{ \mathfrak{S}}}} \omega\,\}\cup \{\, { \mathfrak{l}}_{(\centerdot,\centerdot)}\,\}\cup \right.\nonumber\\
&&\hspace{2cm}\left.  \{\, { \mathfrak{l}}_{(\sigma,\centerdot)}\;\vert\; \gamma^\star \sqsubseteq_{{}_{{ \mathfrak{S}}}} \sigma\sqsubseteq_{{}_{{ \mathfrak{S}}}} \omega'\,\}\cup \{\, { \mathfrak{l}}_{(\centerdot,\sigma)}\;\vert\; \gamma^\star \sqsubseteq_{{}_{{ \mathfrak{S}}}}\sigma\sqsubseteq_{{}_{{ \mathfrak{S}}}} \omega'\,\}\right)
\end{eqnarray}
\end{lemma}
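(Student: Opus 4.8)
The plan is to reduce Lemma \ref{lemmacontexttype2b} to the previously established Lemma \ref{lemmacontexttype2a}, since the hypothesis here is the special case $\gamma'=\gamma^\star$ of the hypothesis there. First I would invoke Lemma \ref{lemmacontexttype2a} with the choice $\gamma' := \gamma^\star$: this is legitimate because $\gamma^\star \sqsupseteq_{{}_{\overline{ \mathfrak{S}}}} \gamma^\star$, so the effect ${ \mathfrak{l}}_{(\gamma,\gamma^\star)}$ is a genuine real effect belonging to $C$, and $C$ is a maximal compatibility context of type 2. Lemma \ref{lemmacontexttype2a} then yields elements $\lambda,\lambda',\omega,\omega'\in \overline{ \mathfrak{S}}\smallsetminus\{\bot_{{}_{\overline{ \mathfrak{S}}}}\}$ with $\lambda' \sqsupseteq_{{}_{\overline{ \mathfrak{S}}}} \lambda^\star$, $\omega \sqsupseteq_{{}_{\overline{ \mathfrak{S}}}} \gamma \sqsupseteq_{{}_{\overline{ \mathfrak{S}}}} \lambda$, $\omega' \sqsupseteq_{{}_{\overline{ \mathfrak{S}}}} \gamma^\star \sqsupseteq_{{}_{\overline{ \mathfrak{S}}}} \lambda'$, together with the inclusion of $(C\cap { \mathfrak{U}}_{\overline{ \mathfrak{S}}})$ into the union of the four principal-interval families determined by $(\lambda,\omega)$ and $(\lambda',\omega')$.

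The key remaining step is to show that in fact one may take $\lambda = \gamma$ and $\lambda' = \gamma^\star$; i.e.\ the lower endpoints may be raised from $\lambda,\lambda'$ up to $\gamma,\gamma^\star$. The argument is that ${ \mathfrak{l}}_{(\gamma,\centerdot)}$ and ${ \mathfrak{l}}_{(\gamma^\star,\centerdot)}$ both lie in $C\cap { \mathfrak{U}}_{\overline{ \mathfrak{S}}}$: indeed ${ \mathfrak{l}}_{(\gamma,\centerdot)}={ \mathfrak{l}}_{(\gamma,\gamma^\star)}\sqcap_{{}_{{ \mathfrak{E}}_{ \mathfrak{S}}}} { \mathfrak{Y}}_{{ \mathfrak{E}}_{ \mathfrak{S}}}$ and ${ \mathfrak{l}}_{(\gamma^\star,\centerdot)}=\overline{{ \mathfrak{l}}_{(\gamma,\gamma^\star)}\sqcap_{{}_{{ \mathfrak{E}}_{ \mathfrak{S}}}} \overline{{ \mathfrak{Y}}_{{ \mathfrak{E}}_{ \mathfrak{S}}}}}$, and both operations preserve membership in the maximal context $C$ by Theorem \ref{synthesiscontexts} (properties (\ref{contexteta}), (\ref{contextbar}), (\ref{contextwedge})), while $\overline{{ \mathfrak{Y}}_{{ \mathfrak{E}}_{ \mathfrak{S}}}}$ is real by (\ref{axiomreduc4}). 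Since ${ \mathfrak{l}}_{(\gamma,\centerdot)}$ appears in the first pair of families, it must satisfy $\gamma \sqsubseteq_{{}_{{ \mathfrak{S}}}}\omega$, and likewise ${ \mathfrak{l}}_{(\gamma^\star,\centerdot)}$ in the second pair forces $\gamma^\star \sqsubseteq_{{}_{{ \mathfrak{S}}}}\omega'$. (One should rule out the crossed possibility ${ \mathfrak{l}}_{(\gamma,\centerdot)}$ landing in the $(\lambda',\omega')$ family: that would give $\gamma^\star \sqsubseteq_{{}_{\overline{ \mathfrak{S}}}}\lambda'\sqsubseteq_{{}_{\overline{ \mathfrak{S}}}} \gamma$, contradicting $\neg\,\widehat{\gamma\gamma^\star}{}^{{}^{\overline{ \mathfrak{S}}}}$, which holds by condition (\ref{starcomplement}).) Then, since every ${ \mathfrak{l}}_{(\sigma,\centerdot)}\in C\cap { \mathfrak{U}}_{\overline{ \mathfrak{S}}}$ is (via the $\sqcap_{{}_{{ \mathfrak{E}}_{ \mathfrak{S}}}}$-closure of $C$) jointly compatible with ${ \mathfrak{l}}_{(\gamma,\gamma^\star)}$, the property (\ref{precisionrealmorphism}) from the proof of Lemma \ref{genmeasurementwedgecompatible} forces $\sigma^\star \not\sqsubseteq_{{}_{{ \mathfrak{S}}}}\gamma$ and $\sigma^\star\not\sqsubseteq_{{}_{{ \mathfrak{S}}}}\gamma^\star$; combined with the type-1 structure from Lemma \ref{lemmacontext} this forces each such $\sigma$ to lie in $[\gamma,\omega]$ or $[\gamma^\star,\omega']$, which is exactly the desired conclusion.

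The main obstacle I anticipate is the bookkeeping in pinning down which of the two principal-interval families each effect of $C\cap { \mathfrak{U}}_{\overline{ \mathfrak{S}}}$ belongs to, and in particular carefully excluding the crossed assignments using $\neg\,\widehat{\gamma\gamma^\star}{}^{{}^{\overline{ \mathfrak{S}}}}$ and the order-reversing property (\ref{orderreversing}) of $\star$. A second subtlety is that Lemma \ref{lemmacontexttype2a} only controls $(C\cap { \mathfrak{U}}_{\overline{ \mathfrak{S}}})$, not all of $C$, so the statement to be proven is correctly phrased only about $C\cap { \mathfrak{U}}_{\overline{ \mathfrak{S}}}$ as well, and no claim about the non-${ \mathfrak{U}}_{\overline{ \mathfrak{S}}}$ part of $C$ need be made; I would simply transcribe the final inclusion with lower endpoints $\gamma$ and $\gamma^\star$ in place of $\lambda$ and $\lambda'$ and with $\omega,\omega'$ the endpoints supplied by Lemma \ref{lemmacontexttype2a}, which by the above already satisfy $\omega \sqsupseteq_{{}_{\overline{ \mathfrak{S}}}}\gamma$ and $\omega' \sqsupseteq_{{}_{\overline{ \mathfrak{S}}}}\gamma^\star$.
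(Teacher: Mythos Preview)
Your reduction to Lemma~\ref{lemmacontexttype2a} with $\gamma' := \gamma^\star$ is exactly the right starting point, and the paper does the same. However, you are working much harder than necessary on the ``key remaining step'' of identifying $\lambda = \gamma$ and $\lambda' = \gamma^\star$, and your proposed route through property~(\ref{precisionrealmorphism}) is not clearly correct.

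The paper's argument is a one-line antisymmetry chain. From Lemma~\ref{lemmacontexttype2a} you already have $\gamma \sqsupseteq_{{}_{\overline{\mathfrak{S}}}} \lambda$, $\gamma^\star = \gamma' \sqsupseteq_{{}_{\overline{\mathfrak{S}}}} \lambda'$, and $\lambda' \sqsupseteq_{{}_{\overline{\mathfrak{S}}}} \lambda^\star$. Applying the order-reversing property~(\ref{orderreversing}) to $\gamma \sqsupseteq_{{}_{\overline{\mathfrak{S}}}} \lambda$ gives $\lambda^\star \sqsupseteq_{{}_{\overline{\mathfrak{S}}}} \gamma^\star$. Stringing these together yields
\[
\gamma^\star \;\sqsupseteq_{{}_{\overline{\mathfrak{S}}}}\; \lambda' \;\sqsupseteq_{{}_{\overline{\mathfrak{S}}}}\; \lambda^\star \;\sqsupseteq_{{}_{\overline{\mathfrak{S}}}}\; \gamma^\star,
\]
and antisymmetry forces $\lambda' = \lambda^\star = \gamma^\star$, hence $\lambda = \gamma$ by involutivity~(\ref{involutive}). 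The interval description from Lemma~\ref{lemmacontexttype2a} then reads verbatim as the desired conclusion.

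By contrast, your attempt to lift the lower endpoints by analyzing each $\mathfrak{l}_{(\sigma,\centerdot)}\in C\cap\mathfrak{U}_{\overline{\mathfrak{S}}}$ individually does not go through as written: the infimum $\mathfrak{l}_{(\sigma,\centerdot)} \sqcap_{{}_{\mathfrak{E}_{\mathfrak{S}}}} \mathfrak{l}_{(\gamma,\gamma^\star)}$ is $\mathfrak{l}_{(\sigma\sqcup\gamma,\centerdot)}$ or $\mathfrak{l}_{(\centerdot,\centerdot)}$, both automatically real, so (\ref{precisionrealmorphism}) imposes no constraint of the form ``$\sigma^\star \not\sqsubseteq \gamma$''. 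Even granting that claim, it is not clear how ``$\sigma^\star \not\sqsubseteq \gamma$ and $\sigma^\star \not\sqsubseteq \gamma^\star$'' together with $\lambda \sqsubseteq \sigma$ would force $\gamma \sqsubseteq \sigma$. You do mention the order-reversing property in your obstacles paragraph, but only for excluding crossed assignments; the decisive use is the antisymmetry chain above, which dispatches the whole lemma immediately.
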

\begin{proof}
We obviously take Lemma \ref{lemmacontexttype2a} as a departure point.\\
If $\gamma'=\gamma^\star$ we have then $\gamma^\star=\gamma' \sqsupseteq_{{}_{\overline{ \mathfrak{S}}}}\lambda' \sqsupseteq_{{}_{\overline{ \mathfrak{S}}}}\lambda^\star \sqsupseteq_{{}_{\overline{ \mathfrak{S}}}} \gamma^\star$.
%We note that, for any $ \lambda \in \overline{ \mathfrak{S}} \smallsetminus \{\bot_{{}_{\overline{ \mathfrak{S}}}}\}$ such that $\lambda \sqsubset_{{}_{\overline{ \mathfrak{S}}}} \gamma$, the measurement maps ${ \mathfrak{m}}_{{ \mathfrak{l}}_{(\gamma,\gamma^\star)}}$ and ${ \mathfrak{m}}_{{ \mathfrak{l}}_{(\lambda,\centerdot)}}$ are not compatible (trivial check of the impossibility for a joint channel to exist), and then ${ \mathfrak{l}}_{(\lambda,\centerdot)}\notin C$. As a conclusion, we can then choose $\lambda:=\gamma$ and $\lambda':=\gamma^\star$. 
This concludes the proof.
\end{proof}

In the following, ${ \mathcal{C}}$ will designate the compatibility cover and we will consider a coherent operational description denoted ${ \mathfrak{D}}:=({ \mathfrak{d}}^{C})_{C \in { \mathcal{C}}}$.  We will denote by $(\Sigma^C_{ \mathfrak{D}})_{C \in { \mathcal{C}}}$ the corresponding ontic state. 

\begin{lemma}
For any pair $C^{(1)}_{\omega},C^{(1)}_{\omega'}$ of maximal compatibility contexts of type 1 associated to $\omega,\omega'\in \overline{ \mathfrak{S}}^{{}^{pure}}$ (see Lemma \ref{lemmacontexttype1}), we have
\begin{eqnarray}
\left\{\begin{array}{rcl}
\omega \sqcap_{{}_{\overline{ \mathfrak{S}}}} \Sigma^{C^{(1)}_{\omega'}}_{ \mathfrak{D}} & \sqsubseteq_{{}_{\overline{ \mathfrak{S}}}} &  \Sigma^{C^{(1)}_{\omega}}_{ \mathfrak{D}}\\
\omega' \sqcap_{{}_{\overline{ \mathfrak{S}}}} \Sigma^{C^{(1)}_{\omega}}_{ \mathfrak{D}} & \sqsubseteq_{{}_{\overline{ \mathfrak{S}}}} &  \Sigma^{C^{(1)}_{\omega'}}_{ \mathfrak{D}}
\end{array}
\right.\label{consitencyontic}
\end{eqnarray}
\end{lemma}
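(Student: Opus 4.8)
The plan is to unwind the definition of the ontic state $(\Sigma^C_{ \mathfrak{D}})_{C\in { \mathcal{C}}}$ given in Theorem \ref{existenceonticstate}, specialized to the type-1 contexts $C^{(1)}_{\omega}$ and $C^{(1)}_{\omega'}$, and then use the sheaf condition (\ref{sheafdistributionrecollement}) satisfied by the coherent operational description ${ \mathfrak{D}}$ to transfer information between the two contexts. The key point is that $\omega\sqcap_{{}_{\overline{ \mathfrak{S}}}}\omega'$ is a real state lying below both $\omega$ and $\omega'$, so the effect ${ \mathfrak{l}}_{(\omega\sqcap_{{}_{\overline{ \mathfrak{S}}}}\omega',\centerdot)}$ belongs to both $C^{(1)}_{\omega}$ and $C^{(1)}_{\omega'}$ (by the explicit description of type-1 contexts in Lemma \ref{lemmacontexttype1}, since $\omega\sqcap_{{}_{\overline{ \mathfrak{S}}}}\omega'\sqsubseteq_{{}_{{ \mathfrak{S}}}}\omega$ and $\sqsubseteq_{{}_{{ \mathfrak{S}}}}\omega'$). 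More generally, every effect of the form ${ \mathfrak{l}}_{(\sigma,\centerdot)}$ or ${ \mathfrak{l}}_{(\centerdot,\sigma)}$ with $\sigma\sqsubseteq_{{}_{{ \mathfrak{S}}}}\omega\sqcap_{{}_{\overline{ \mathfrak{S}}}}\omega'$ lies in $C^{(1)}_{\omega}\cap C^{(1)}_{\omega'}$.

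\textbf{Key steps.} First I would recall from the proof of Theorem \ref{existenceonticstate} that $\Sigma^{C^{(1)}_{\omega}}_{ \mathfrak{D}}=\bigsqcap{}^{{}^{{ \mathfrak{S}}}}(\epsilon^{ \mathfrak{S}}_{{ \mathfrak{l}}^{C^{(1)}_{\omega}}_{ \mathfrak{D}}})^{-1}(\textit{\bf Y})$ where ${ \mathfrak{l}}^{C^{(1)}_{\omega}}_{ \mathfrak{D}}=\bigsqcap{}^{{}^{\overline{ \mathfrak{E}}_{ \mathfrak{S}}}}\{{ \mathfrak{l}}\in C^{(1)}_{\omega}\;\vert\; { \mathfrak{d}}^{C^{(1)}_{\omega}}({ \mathfrak{l}})=\textit{\bf Y}\}$, and that this has the form ${ \mathfrak{l}}_{(\Sigma^{C^{(1)}_{\omega}}_{ \mathfrak{D}},\centerdot)}$; hence $\epsilon^{ \mathfrak{S}}_{{ \mathfrak{l}}}(\Sigma^{C^{(1)}_{\omega}}_{ \mathfrak{D}})=\textit{\bf Y}\Leftrightarrow { \mathfrak{d}}^{C^{(1)}_{\omega}}({ \mathfrak{l}})=\textit{\bf Y}$ for all ${ \mathfrak{l}}\in C^{(1)}_{\omega}$. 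Second, I would observe that for the effect ${ \mathfrak{l}}_{(\Sigma^{C^{(1)}_{\omega}}_{ \mathfrak{D}},\centerdot)}$ to make sense in $C^{(1)}_{\omega}$ we need $\Sigma^{C^{(1)}_{\omega}}_{ \mathfrak{D}}\sqsubseteq_{{}_{{ \mathfrak{S}}}}\omega$ — this holds because ${ \mathfrak{l}}_{(\omega,\centerdot)}\in C^{(1)}_{\omega}$ and by minimality $\Sigma^{C^{(1)}_{\omega}}_{ \mathfrak{D}}$ dominates the state attached to it, but actually we rather get $\Sigma^{C^{(1)}_{\omega}}_{ \mathfrak{D}}$ from the infimum of $\textbf{Y}$-valued effects, so I would argue directly that $\Sigma^{C^{(1)}_{\omega}}_{ \mathfrak{D}}$ is a real state with $\Sigma^{C^{(1)}_{\omega}}_{ \mathfrak{D}}\sqsubseteq_{{}_{\overline{ \mathfrak{S}}}}\omega$ since ${ \mathfrak{l}}_{(\omega,\centerdot)}$ has ${ \mathfrak{d}}^{C^{(1)}_{\omega}}({ \mathfrak{l}}_{(\omega,\centerdot)})=\textbf{Y}$ only if $\omega$ itself... here I need to be careful, so the cleaner route is: the state $\sigma:=\omega\sqcap_{{}_{\overline{ \mathfrak{S}}}}\Sigma^{C^{(1)}_{\omega'}}_{ \mathfrak{D}}$ satisfies $\sigma\sqsubseteq_{{}_{{ \mathfrak{S}}}}\omega$, hence ${ \mathfrak{l}}_{(\sigma,\centerdot)}\in C^{(1)}_{\omega}$, and also (since $\Sigma^{C^{(1)}_{\omega'}}_{ \mathfrak{D}}\sqsubseteq_{{}_{{ \mathfrak{S}}}}\omega'$) ${ \mathfrak{l}}_{(\sigma,\centerdot)}\in C^{(1)}_{\omega'}$; by definition $\epsilon^{ \mathfrak{S}}_{{ \mathfrak{l}}_{(\sigma,\centerdot)}}(\Sigma^{C^{(1)}_{\omega'}}_{ \mathfrak{D}})=\textbf{Y}$, so ${ \mathfrak{d}}^{C^{(1)}_{\omega'}}({ \mathfrak{l}}_{(\sigma,\centerdot)})=\textbf{Y}$; the sheaf condition (\ref{sheafdistributionrecollement}) then gives ${ \mathfrak{d}}^{C^{(1)}_{\omega}}({ \mathfrak{l}}_{(\sigma,\centerdot)})=\textbf{Y}$; hence $\epsilon^{ \mathfrak{S}}_{{ \mathfrak{l}}_{(\sigma,\centerdot)}}(\Sigma^{C^{(1)}_{\omega}}_{ \mathfrak{D}})=\textbf{Y}$, i.e. $\sigma\sqsubseteq_{{}_{{ \mathfrak{S}}}}\Sigma^{C^{(1)}_{\omega}}_{ \mathfrak{D}}$, which is exactly the first line of (\ref{consitencyontic}). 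The second line is obtained by exchanging the roles of $\omega$ and $\omega'$.

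\textbf{Main obstacle.} The delicate point I expect to have to nail down carefully is the fact that $\Sigma^{C^{(1)}_{\omega}}_{ \mathfrak{D}}$ is a real state lying below $\omega$, so that the effects ${ \mathfrak{l}}_{(\sigma,\centerdot)}$ with $\sigma\sqsubseteq_{{}_{{ \mathfrak{S}}}}\omega$ are indeed in the context $C^{(1)}_{\omega}$ and that $\omega\sqcap_{{}_{\overline{ \mathfrak{S}}}}\Sigma^{C^{(1)}_{\omega'}}_{ \mathfrak{D}}$ is well-defined and real — this requires that $\overline{ \mathfrak{S}}$ is closed under infima (which it is, being a real structure, cf.\ (\ref{completemeetirreducibleordergeneratingS}) and the condition $\bigsqcap{}^{{}^{{ \mathfrak{S}}}}S\in\overline{ \mathfrak{S}}$ in Definition \ref{definitionstarstructure}), and that $\Sigma^{C^{(1)}_{\omega'}}_{ \mathfrak{D}}\in\overline{ \mathfrak{S}}$, which is part of the definition of the ontic state. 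One also has to check that the intermediate state $\sigma$ does not collapse to $\bot_{{}_{ \mathfrak{S}}}$ in a problematic way — but ${ \mathfrak{l}}_{(\bot_{{}_{ \mathfrak{S}}},\centerdot)}={ \mathfrak{Y}}_{{ \mathfrak{E}}_{ \mathfrak{S}}}$ still belongs to every context by (\ref{contexteta}) and evaluates to $\textbf{Y}$ on every state by (\ref{ety}), so the argument degenerates harmlessly. Everything else is a routine application of the duality $\epsilon^{ \mathfrak{S}}_{{ \mathfrak{l}}_{(\sigma,\centerdot)}}(\sigma'')=\textbf{Y}\Leftrightarrow\sigma\sqsubseteq_{{}_{ \mathfrak{S}}}\sigma''$ from (\ref{defepsilonS}) together with the sheaf/gluing condition.
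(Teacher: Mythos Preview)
Your proposal is correct and follows essentially the same route as the paper's own proof: identify the overlap $C^{(1)}_{\omega}\cap C^{(1)}_{\omega'}$ as the effects ${ \mathfrak{l}}_{(\sigma,\centerdot)}$, ${ \mathfrak{l}}_{(\centerdot,\sigma)}$ with $\sigma\sqsubseteq_{{}_{{ \mathfrak{S}}}}\omega\sqcap_{{}_{{ \mathfrak{S}}}}\omega'$, use the defining property $\epsilon^{ \mathfrak{S}}_{{ \mathfrak{l}}}(\Sigma^{C}_{ \mathfrak{D}})={ \mathfrak{d}}^{C}({ \mathfrak{l}})$ of the ontic state together with the sheaf condition to pass from one context to the other, and conclude via $\Sigma^{C^{(1)}_{\omega}}_{ \mathfrak{D}}\sqsubseteq_{{}_{\overline{ \mathfrak{S}}}}\omega$. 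The only stylistic difference is that the paper states the transfer as a general implication for arbitrary $\sigma\sqsubseteq_{{}_{{ \mathfrak{S}}}}\omega\sqcap_{{}_{{ \mathfrak{S}}}}\omega'$ and then specializes, whereas you apply it directly to $\sigma=\omega\sqcap_{{}_{\overline{ \mathfrak{S}}}}\Sigma^{C^{(1)}_{\omega'}}_{ \mathfrak{D}}$; both are fine, and the ``main obstacle'' you flag ($\Sigma^{C^{(1)}_{\omega}}_{ \mathfrak{D}}\in\overline{ \mathfrak{S}}$ with $\Sigma^{C^{(1)}_{\omega}}_{ \mathfrak{D}}\sqsubseteq_{{}_{\overline{ \mathfrak{S}}}}\omega$) is indeed handled by the form ${ \mathfrak{l}}^{C^{(1)}_{\omega}}_{ \mathfrak{D}}={ \mathfrak{l}}_{(\Sigma^{C^{(1)}_{\omega}}_{ \mathfrak{D}},\centerdot)}\in C^{(1)}_{\omega}$ from Theorem \ref{existenceonticstate}.
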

\begin{proof}
By definition of the ontic state, we have
\begin{eqnarray}
\forall { \mathfrak{l}}\in C^{(1)}_{\omega},&& { \mathfrak{d}}^{C^{(1)}_{\omega}}({ \mathfrak{l}})=\epsilon^{ \mathfrak{S}}_{ \mathfrak{l}}(\Sigma^{C^{(1)}_{\omega}}_{ \mathfrak{D}})
\end{eqnarray}
The consistency of ${ \mathfrak{D}}$ implies
\begin{eqnarray}
\forall { \mathfrak{l}}\in C^{(1)}_{\omega}\cap C^{(1)}_{\omega'},&& \epsilon^{ \mathfrak{S}}_{ \mathfrak{l}}(\Sigma^{C^{(1)}_{\omega}}_{ \mathfrak{D}}) =\epsilon^{ \mathfrak{S}}_{ \mathfrak{l}}(\Sigma^{C^{(1)}_{\omega'}}_{ \mathfrak{D}})
\end{eqnarray}
But, 
\begin{eqnarray}
C^{(1)}_{\omega}\cap C^{(1)}_{\omega'} &=& \left( \{\, { \mathfrak{l}}_{(\sigma,\centerdot)}\;\vert\; \sigma\sqsubseteq_{{}_{{ \mathfrak{S}}}} \omega \sqcap_{{}_{{ \mathfrak{S}}}}\omega'\,\}\cup \{\, { \mathfrak{l}}_{(\centerdot,\sigma)}\;\vert\; \sigma\sqsubseteq_{{}_{{ \mathfrak{S}}}} \omega \sqcap_{{}_{{ \mathfrak{S}}}}\omega'\,\}\cup \{\, { \mathfrak{l}}_{(\centerdot,\centerdot)}\,\}\right).\;\;\;\;\;\;\;\;\;\;\;\;\;\;\;\;\;\;\;\;\;\;
\end{eqnarray}
Hence, 
\begin{eqnarray}
\hspace{-1.3cm}\left( \sigma \sqsubseteq_{{}_{{ \mathfrak{S}}}} \omega \sqcap_{{}_{{ \mathfrak{S}}}}\omega' \;\;\textit{\rm and}\;\; \sigma \sqsubseteq_{{}_{{ \mathfrak{S}}}} \Sigma^{C^{(1)}_{\omega}}_{ \mathfrak{D}}\right) \;\;\Rightarrow \;\; \left(\epsilon^{ \mathfrak{S}}_{{ \mathfrak{l}}_{(\sigma,\centerdot)}}(\Sigma^{C^{(1)}_{\omega}}_{ \mathfrak{D}})=\textit{\bf Y}=\epsilon^{ \mathfrak{S}}_{{ \mathfrak{l}}_{(\sigma,\centerdot)}}(\Sigma^{C^{(1)}_{\omega'}}_{ \mathfrak{D}})\right)
\;\;\Rightarrow \;\;  \sigma \sqsubseteq_{{}_{{ \mathfrak{S}}}} \Sigma^{C^{(1)}_{\omega'}}_{ \mathfrak{D}}.\;\;\;\;\;
\end{eqnarray}
In other words,  noting that $\Sigma^{C^{(1)}_{\omega}}_{ \mathfrak{D}} \sqsubseteq_{{}_{\overline{ \mathfrak{S}}}} \omega$, we obtain $\omega' \sqcap_{{}_{\overline{ \mathfrak{S}}}} \Sigma^{C^{(1)}_{\omega}}_{ \mathfrak{D}}  \sqsubseteq_{{}_{\overline{ \mathfrak{S}}}}   \Sigma^{C^{(1)}_{\omega'}}_{ \mathfrak{D}}$. The other inequality is obtained in the same way.
\end{proof}

\begin{lemma}\label{lemmanexistsomega0}
If there exists $\omega,\omega'\in \overline{ \mathfrak{S}}^{{}^{pure}}$ such that $\Sigma^{C^{(1)}_{\omega}}_{ \mathfrak{D}}$ and $\Sigma^{C^{(1)}_{\omega'}}_{ \mathfrak{D}}$ admit a common upper-bound in $\overline{ \mathfrak{S}}$, then for any $\omega''\in \underline{\Sigma^{C^{(1)}_{\omega}}_{ \mathfrak{D}} \sqcup_{{}_{\overline{ \mathfrak{S}}}} \Sigma^{C^{(1)}_{\omega'}}_{ \mathfrak{D}}}_{{}_{\overline{ \mathfrak{S}}}}$ we have $\Sigma^{C^{(1)}_{\omega''}}_{ \mathfrak{D}} \sqsupseteq_{{}_{\overline{ \mathfrak{S}}}} (\Sigma^{C^{(1)}_{\omega}}_{ \mathfrak{D}} \sqcup_{{}_{\overline{ \mathfrak{S}}}} \Sigma^{C^{(1)}_{\omega'}}_{ \mathfrak{D}})$
\end{lemma}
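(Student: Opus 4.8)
The plan is to exploit the consistency inequalities (\ref{consitencyontic}) established in the previous lemma together with the explicit formula (\ref{onticstate1})--(\ref{onticstate2}) for the ontic states attached to type-1 contexts. Fix $\omega,\omega'\in \overline{ \mathfrak{S}}^{{}^{pure}}$ with $\widehat{\Sigma^{C^{(1)}_{\omega}}_{ \mathfrak{D}}\Sigma^{C^{(1)}_{\omega'}}_{ \mathfrak{D}}}{}^{{}^{\overline{ \mathfrak{S}}}}$, so that $\tau := \Sigma^{C^{(1)}_{\omega}}_{ \mathfrak{D}} \sqcup_{{}_{\overline{ \mathfrak{S}}}} \Sigma^{C^{(1)}_{\omega'}}_{ \mathfrak{D}}$ exists in $\overline{ \mathfrak{S}}$, and fix $\omega''\in \underline{\tau}_{{}_{\overline{ \mathfrak{S}}}}$, i.e. $\omega''\in \overline{ \mathfrak{S}}^{{}^{pure}}$ with $\omega''\sqsupseteq_{{}_{\overline{ \mathfrak{S}}}}\tau$. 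Since $\omega''\sqsupseteq_{{}_{\overline{ \mathfrak{S}}}}\tau\sqsupseteq_{{}_{\overline{ \mathfrak{S}}}}\Sigma^{C^{(1)}_{\omega}}_{ \mathfrak{D}}$, and $\Sigma^{C^{(1)}_{\omega}}_{ \mathfrak{D}}\sqsubseteq_{{}_{\overline{ \mathfrak{S}}}}\omega$, we in particular have $\Sigma^{C^{(1)}_{\omega}}_{ \mathfrak{D}}\sqsubseteq_{{}_{\overline{ \mathfrak{S}}}}\omega''\sqcap_{{}_{\overline{ \mathfrak{S}}}}\omega$; the first step is therefore to apply (\ref{consitencyontic}) with the pair $(\omega,\omega'')$ in place of $(\omega,\omega')$ to deduce $\omega''\sqcap_{{}_{\overline{ \mathfrak{S}}}}\Sigma^{C^{(1)}_{\omega}}_{ \mathfrak{D}}\sqsubseteq_{{}_{\overline{ \mathfrak{S}}}}\Sigma^{C^{(1)}_{\omega''}}_{ \mathfrak{D}}$, and since $\Sigma^{C^{(1)}_{\omega}}_{ \mathfrak{D}}\sqsubseteq_{{}_{\overline{ \mathfrak{S}}}}\omega''$ this simplifies to $\Sigma^{C^{(1)}_{\omega}}_{ \mathfrak{D}}\sqsubseteq_{{}_{\overline{ \mathfrak{S}}}}\Sigma^{C^{(1)}_{\omega''}}_{ \mathfrak{D}}$. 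Symmetrically, applying (\ref{consitencyontic}) to the pair $(\omega',\omega'')$ gives $\Sigma^{C^{(1)}_{\omega'}}_{ \mathfrak{D}}\sqsubseteq_{{}_{\overline{ \mathfrak{S}}}}\Sigma^{C^{(1)}_{\omega''}}_{ \mathfrak{D}}$.

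From these two inclusions $\Sigma^{C^{(1)}_{\omega''}}_{ \mathfrak{D}}$ is a common upper bound of $\Sigma^{C^{(1)}_{\omega}}_{ \mathfrak{D}}$ and $\Sigma^{C^{(1)}_{\omega'}}_{ \mathfrak{D}}$ in $\overline{ \mathfrak{S}}$, so by the very definition of the supremum $\tau=\Sigma^{C^{(1)}_{\omega}}_{ \mathfrak{D}}\sqcup_{{}_{\overline{ \mathfrak{S}}}}\Sigma^{C^{(1)}_{\omega'}}_{ \mathfrak{D}}$ we conclude $\Sigma^{C^{(1)}_{\omega''}}_{ \mathfrak{D}}\sqsupseteq_{{}_{\overline{ \mathfrak{S}}}}\tau$, which is the desired statement. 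The only subtlety to be careful about is that, before invoking (\ref{consitencyontic}) for $(\omega,\omega'')$ and $(\omega',\omega'')$, one must first know these suprema exist — but this is immediate, since $\Sigma^{C^{(1)}_{\omega}}_{ \mathfrak{D}}$ and $\Sigma^{C^{(1)}_{\omega''}}_{ \mathfrak{D}}$ are both bounded above by $\omega''$ (using $\Sigma^{C^{(1)}_{\omega}}_{ \mathfrak{D}}\sqsubseteq_{{}_{\overline{ \mathfrak{S}}}}\omega''$ obtained above and $\Sigma^{C^{(1)}_{\omega''}}_{ \mathfrak{D}}\sqsubseteq_{{}_{\overline{ \mathfrak{S}}}}\omega''$ by definition), and $\overline{ \mathfrak{S}}$ is down complete, so boundedness above guarantees the supremum. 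Actually (\ref{consitencyontic}) as stated applies to an \emph{arbitrary} pair of pure elements without any hypothesis, so even this check is unnecessary — the inequalities $\omega''\sqcap_{{}_{\overline{ \mathfrak{S}}}}\Sigma^{C^{(1)}_{\omega}}_{ \mathfrak{D}}\sqsubseteq_{{}_{\overline{ \mathfrak{S}}}}\Sigma^{C^{(1)}_{\omega''}}_{ \mathfrak{D}}$ hold unconditionally.

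I expect there to be no genuine obstacle here: the lemma is essentially a formal consequence of the consistency relations already proven, and the proof is three short applications of (\ref{consitencyontic}) followed by the universal property of $\sqcup_{{}_{\overline{ \mathfrak{S}}}}$. The one place where a slip could occur is in keeping the roles of the three contexts straight — one must use (\ref{consitencyontic}) in the "right direction" each time, always with $\omega''$ playing the role of the index whose context one restricts \emph{to}, so that the factor $\omega''\sqcap_{{}_{\overline{ \mathfrak{S}}}}(-)$ collapses because the state on the other side already sits below $\omega''$. Once that bookkeeping is set up correctly, the conclusion drops out immediately.
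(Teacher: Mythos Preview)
Your proof is correct and is precisely the argument the paper has in mind when it writes ``Straightforward.'' You correctly apply the consistency inequality (\ref{consitencyontic}) to the pairs $(\omega,\omega'')$ and $(\omega',\omega'')$, use $\Sigma^{C^{(1)}_{\omega}}_{\mathfrak D},\Sigma^{C^{(1)}_{\omega'}}_{\mathfrak D}\sqsubseteq_{{}_{\overline{\mathfrak S}}}\omega''$ to collapse the infima, and conclude via the universal property of the supremum.
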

\begin{proof}
Straightforward.
\end{proof}

\begin{lemma}\label{lemmacU=U}
For any  coherent  operational description ${ \mathfrak{D}}$ the subset $U_{ \mathfrak{D}}$ of $\overline{ \mathfrak{S}}$ defined by $U_{ \mathfrak{D}}:=\{\, \Sigma^{C^{(1)}_{\omega}}_{ \mathfrak{D}}\;\vert\; \omega\in \overline{ \mathfrak{S}}{}^{{}^{pure}}\,\}$ satisfies
\begin{eqnarray}
cl_c^{\overline{ \mathfrak{S}}}(U_{ \mathfrak{D}})=U_{ \mathfrak{D}}
\end{eqnarray}
\end{lemma}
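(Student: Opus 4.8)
The statement asserts that $U_{\mathfrak D}:=\{\,\Sigma^{C^{(1)}_\omega}_{\mathfrak D}\;\vert\;\omega\in\overline{\mathfrak S}^{{}^{pure}}\,\}$ is closed under $cl_c^{\overline{\mathfrak S}}$. Recall that $cl_c^{\overline{\mathfrak S}}(U_{\mathfrak D})=\bigsqcup_{n\geq 1}({\mathfrak c}^{\overline{\mathfrak S}})^{\circ n}(U_{\mathfrak D})$, and by the algebraicity property (\ref{algebraicclosure}) it suffices to show that for every finite $V\sqsubseteq U_{\mathfrak D}$ with $\widehat{\;V\;}{}^{{}^{\overline{\mathfrak S}}}$ the element $\bigsqcup{}^{{}^{\overline{\mathfrak S}}}V$ is again of the form $\Sigma^{C^{(1)}_{\omega''}}_{\mathfrak D}$ for some $\omega''\in\overline{\mathfrak S}^{{}^{pure}}$; equivalently, it suffices to prove $\{\sigma\}\sqsubseteq U_{\mathfrak D}$ whenever $\{\sigma\}\sqsubseteq {\mathfrak c}^{\overline{\mathfrak S}}(U_{\mathfrak D})$, and then iterate. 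Since ${\mathfrak c}^{\overline{\mathfrak S}}(U_{\mathfrak D})\sqsubseteq U_{\mathfrak D}$ gives $cl_c^{\overline{\mathfrak S}}(U_{\mathfrak D})\sqsubseteq U_{\mathfrak D}$, and $U_{\mathfrak D}\sqsubseteq cl_c^{\overline{\mathfrak S}}(U_{\mathfrak D})$ is automatic from (\ref{propB3c}), I only need the single-step inequality ${\mathfrak c}^{\overline{\mathfrak S}}(U_{\mathfrak D})\sqsubseteq U_{\mathfrak D}$.

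\textbf{Key steps.} First I would reduce to the binary case: given $\omega,\omega'\in\overline{\mathfrak S}^{{}^{pure}}$ with $\widehat{\,\Sigma^{C^{(1)}_\omega}_{\mathfrak D}\,\Sigma^{C^{(1)}_{\omega'}}_{\mathfrak D}\,}{}^{{}^{\overline{\mathfrak S}}}$, show that $\Sigma^{C^{(1)}_\omega}_{\mathfrak D}\sqcup_{{}_{\overline{\mathfrak S}}}\Sigma^{C^{(1)}_{\omega'}}_{\mathfrak D}$ lies in $U_{\mathfrak D}$; the general finite supremum follows by induction on $\lvert V\rvert$ using associativity of $\sqcup_{{}_{\overline{\mathfrak S}}}$ and the fact that all the relevant partial suprema remain bounded. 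For the binary case, let $\omega''$ be any pure state with $\omega''\sqsupseteq_{{}_{\overline{\mathfrak S}}}(\Sigma^{C^{(1)}_\omega}_{\mathfrak D}\sqcup_{{}_{\overline{\mathfrak S}}}\Sigma^{C^{(1)}_{\omega'}}_{\mathfrak D})$ — such an $\omega''$ exists since $\overline{\mathfrak S}$ is generated by its maximal elements. By Lemma \ref{lemmanexistsomega0} we already know $\Sigma^{C^{(1)}_{\omega''}}_{\mathfrak D}\sqsupseteq_{{}_{\overline{\mathfrak S}}}(\Sigma^{C^{(1)}_\omega}_{\mathfrak D}\sqcup_{{}_{\overline{\mathfrak S}}}\Sigma^{C^{(1)}_{\omega'}}_{\mathfrak D})$. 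The remaining work is to establish the reverse inequality $\Sigma^{C^{(1)}_{\omega''}}_{\mathfrak D}\sqsubseteq_{{}_{\overline{\mathfrak S}}}(\Sigma^{C^{(1)}_\omega}_{\mathfrak D}\sqcup_{{}_{\overline{\mathfrak S}}}\Sigma^{C^{(1)}_{\omega'}}_{\mathfrak D})$, so that $\Sigma^{C^{(1)}_{\omega''}}_{\mathfrak D}$ actually equals the supremum, hence $\bigsqcup{}^{{}^{\overline{\mathfrak S}}}\{\Sigma^{C^{(1)}_\omega}_{\mathfrak D},\Sigma^{C^{(1)}_{\omega'}}_{\mathfrak D}\}\in U_{\mathfrak D}$. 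For this I would use the consistency inequalities (\ref{consitencyontic}): writing $\Sigma:=\Sigma^{C^{(1)}_{\omega''}}_{\mathfrak D}$ we have from (\ref{consitencyontic}) applied to the pair $(\omega,\omega'')$ that $\omega\sqcap_{{}_{\overline{\mathfrak S}}}\Sigma\sqsubseteq_{{}_{\overline{\mathfrak S}}}\Sigma^{C^{(1)}_\omega}_{\mathfrak D}$, and similarly $\omega'\sqcap_{{}_{\overline{\mathfrak S}}}\Sigma\sqsubseteq_{{}_{\overline{\mathfrak S}}}\Sigma^{C^{(1)}_{\omega'}}_{\mathfrak D}$. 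The definition of $\Sigma^{C^{(1)}_{\omega''}}_{\mathfrak D}$ via (\ref{onticstate1})--(\ref{onticstate2}) then forces $\Sigma$ to be the infimum of those $(\epsilon^{\mathfrak S}_{{\mathfrak l}^{C^{(1)}_{\omega''}}_{\mathfrak D}})^{-1}(\textbf{Y})$ states, and combining this with $\Sigma\sqsubseteq_{{}_{\overline{\mathfrak S}}}\omega''$ and the explicit form of $C^{(1)}_{\omega''}\cap C^{(1)}_\omega$ and $C^{(1)}_{\omega''}\cap C^{(1)}_{\omega'}$ (principal-filter shapes as in Lemma \ref{lemmacontexttype1}) yields $\Sigma\sqsubseteq_{{}_{\overline{\mathfrak S}}}\Sigma^{C^{(1)}_\omega}_{\mathfrak D}\sqcup_{{}_{\overline{\mathfrak S}}}\Sigma^{C^{(1)}_{\omega'}}_{\mathfrak D}$ after using that $\Sigma^{C^{(1)}_\omega}_{\mathfrak D},\Sigma^{C^{(1)}_{\omega'}}_{\mathfrak D}\sqsubseteq_{{}_{\overline{\mathfrak S}}}\Sigma$.

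\textbf{Main obstacle.} The delicate point is verifying the reverse inequality, i.e. that the ontic state at $\omega''$ is not strictly larger than the join of the two ontic states — a priori a coherent operational description could "know more" at $\omega''$ than the sheaf gluing of its values along $\omega$ and $\omega'$ allows. The resolution must exploit that $\Sigma^{C^{(1)}_{\omega''}}_{\mathfrak D}$ is built (via (\ref{onticstate1})) as the infimum over effects ${\mathfrak l}_{(\sigma,\centerdot)}$ in $C^{(1)}_{\omega''}$ on which ${\mathfrak d}^{C^{(1)}_{\omega''}}$ returns $\textbf{Y}$, that every such $\sigma$ lies below $\omega''$, hence below $\omega$ or carries information already recorded by consistency into $\Sigma^{C^{(1)}_\omega}_{\mathfrak D}$ or $\Sigma^{C^{(1)}_{\omega'}}_{\mathfrak D}$, together with the bimorphic/algebraic structure of ${\mathfrak c}^{\overline{\mathfrak S}}$. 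I expect the cleanest route is to avoid passing through an arbitrary $\omega''$ and instead show directly that ${\mathfrak c}^{\overline{\mathfrak S}}(U_{\mathfrak D})\sqsubseteq U_{\mathfrak D}$ by taking, for a bounded finite $V\sqsubseteq U_{\mathfrak D}$, a pure $\omega^*$ above $\bigsqcup{}^{{}^{\overline{\mathfrak S}}}V$ and proving $\Sigma^{C^{(1)}_{\omega^*}}_{\mathfrak D}=\bigsqcup{}^{{}^{\overline{\mathfrak S}}}V$ using (\ref{consitencyontic}) and Lemma \ref{lemmanexistsomega0} as the two opposing inequalities, with the finite induction handled by repeatedly collapsing pairs; the combinatorial bookkeeping of which $\omega$'s to pick at each stage is the only genuinely technical part.
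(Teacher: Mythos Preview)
Your reduction to showing ${\mathfrak c}^{\overline{\mathfrak S}}(U_{\mathfrak D})\sqsubseteq U_{\mathfrak D}$ is correct, but you then misread what this preorder inequality on ${\mathcal P}(\overline{\mathfrak S})$ demands. You claim that it is ``equivalent'' to showing that each bounded supremum $\bigsqcup{}^{{}^{\overline{\mathfrak S}}}V$ is \emph{of the form} $\Sigma^{C^{(1)}_{\omega''}}_{\mathfrak D}$, i.e.\ a \emph{member} of $U_{\mathfrak D}$. That is strictly stronger than required: recall from (\ref{DefinitionpreorderPSbar}) that $\{\sigma\}\sqsubseteq U_{\mathfrak D}$ means only that $\sigma$ is \emph{dominated} by some element of $U_{\mathfrak D}$, not that $\sigma\in U_{\mathfrak D}$. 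Your ``main obstacle'' --- the reverse inequality $\Sigma^{C^{(1)}_{\omega''}}_{\mathfrak D}\sqsubseteq_{{}_{\overline{\mathfrak S}}}(\Sigma^{C^{(1)}_{\omega}}_{\mathfrak D}\sqcup_{{}_{\overline{\mathfrak S}}}\Sigma^{C^{(1)}_{\omega'}}_{\mathfrak D})$ --- is therefore not needed, and in a non-distributive $\overline{\mathfrak S}$ it will in general fail.

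The paper's proof exploits exactly this. Given $V\sqsubseteq U_{\mathfrak D}$ with $\widehat{\,V\,}{}^{{}^{\overline{\mathfrak S}}}$, choose any pure $\omega$ above $\bigsqcup{}^{{}^{\overline{\mathfrak S}}}V$. Each $v\in V$ satisfies $v\sqsubseteq_{{}_{\overline{\mathfrak S}}}\omega$ and $v\sqsubseteq_{{}_{\overline{\mathfrak S}}}\kappa$ for some $\kappa\in U_{\mathfrak D}$, hence $v\sqsubseteq_{{}_{\overline{\mathfrak S}}}\omega\sqcap_{{}_{\overline{\mathfrak S}}}\kappa$; now (\ref{consitencyontic}) gives $\omega\sqcap_{{}_{\overline{\mathfrak S}}}\kappa\sqsubseteq_{{}_{\overline{\mathfrak S}}}\Sigma^{C^{(1)}_{\omega}}_{\mathfrak D}$ for every such $\kappa$, so $\bigsqcup{}^{{}^{\overline{\mathfrak S}}}V\sqsubseteq_{{}_{\overline{\mathfrak S}}}\Sigma^{C^{(1)}_{\omega}}_{\mathfrak D}\in U_{\mathfrak D}$. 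This is the whole argument for ${\mathfrak c}^{\overline{\mathfrak S}}(U_{\mathfrak D})\sqsubseteq U_{\mathfrak D}$; no binary reduction, no induction on $|V|$, and no reverse inequality. You actually had this step in hand when you invoked Lemma~\ref{lemmanexistsomega0} --- the inequality it furnishes already finishes the job.
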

\begin{proof}
As a strict rewriting of property (\ref{consitencyontic}), we note that we have 
\begin{eqnarray}
\forall \omega\in \overline{ \mathfrak{S}}{}^{{}^{pure}},\exists \alpha\in U_{ \mathfrak{D}} &\vert & \alpha \sqsubseteq_{{}_{\overline{ \mathfrak{S}}}} \omega\;\textit{\rm and}\; \forall \beta\in U_{ \mathfrak{D}},(\omega \sqcap_{{}_{\overline{ \mathfrak{S}}}}\beta)  \sqsubseteq_{{}_{{ \mathfrak{S}}}}  \alpha.
\end{eqnarray}
and then
 \begin{eqnarray}
Max\{\,\bigsqcup{}^{{}^{\overline{ \mathfrak{S}}}}_{\kappa\in U_{ \mathfrak{D}}} (\omega\sqcap_{{}_{\overline{ \mathfrak{S}}}}\kappa)\;\vert\; \omega\in \overline{ \mathfrak{S}}{}^{{}^{pure}} \,\} \sqsubseteq U_{ \mathfrak{D}}.\;\;\;\;\;\;\;\;\;
\end{eqnarray}
Now, we note that
 \begin{eqnarray}
(V\sqsubseteq U_{ \mathfrak{D}}\;\textit{\rm and}\;\widehat{V}{}^{{}^{\overline{ \mathfrak{S}}}}) &\Rightarrow & \exists \omega\in \overline{ \mathfrak{S}}{}^{{}^{pure}}\;\vert\; V\sqsubseteq \{\, (\omega\sqcap_{{}_{\overline{ \mathfrak{S}}}}\kappa)\;\vert\; \kappa\in U_{ \mathfrak{D}} \,\}
\end{eqnarray}
and then
 \begin{eqnarray}
Max\{\,\bigsqcup{}^{{}^{\overline{ \mathfrak{S}}}}V\;\vert\; V\sqsubseteq U_{ \mathfrak{D}}\;\textit{\rm and}\;\widehat{V}{}^{{}^{\overline{ \mathfrak{S}}}}\} \sqsubseteq
Max\{\,\bigsqcup{}^{{}^{\overline{ \mathfrak{S}}}}_{\kappa\in U_{ \mathfrak{D}}} (\omega\sqcap_{{}_{\overline{ \mathfrak{S}}}}\kappa)\;\vert\; \omega\in \overline{ \mathfrak{S}}{}^{{}^{pure}} \,\}.\;\;\;\;\;\;\;\;\;
\end{eqnarray}
In other words
\begin{eqnarray}
{\mathfrak{c}}^{\overline{ \mathfrak{S}}}(U_{ \mathfrak{D}})\sqsubseteq U_{ \mathfrak{D}}
\end{eqnarray}
But we have also (see property (\ref{propB3}))
\begin{eqnarray}
{\mathfrak{c}}^{\overline{ \mathfrak{S}}}(U_{ \mathfrak{D}})\sqsupseteq U_{ \mathfrak{D}}
\end{eqnarray}
Then, by iteration, we deduce the announced result.
\end{proof}

\begin{lemma}\label{lemmanexistsomega1}
The subset of $\overline{ \mathfrak{S}}$ defined by $\{\, \Sigma^{C^{(1)}_{\omega}}_{ \mathfrak{D}}\;\vert\; \omega\in \overline{ \mathfrak{S}}^{{}^{pure}}\,\}$ satisfies 
\begin{eqnarray}
\nexists \gamma\in \overline{ \mathfrak{S}} &\vert & \{\,\gamma,\gamma^\star\,\} \sqsubseteq \{\, \Sigma^{C^{(1)}_{\omega}}_{ \mathfrak{D}}\;\vert\; \omega\in \overline{ \mathfrak{S}}^{{}^{pure}}\,\}.
\end{eqnarray}
\end{lemma}
\begin{proof}
Let us suppose that there exists $\gamma\in \overline{ \mathfrak{S}}$ such that $\{\,\gamma,\gamma^\star\,\} \sqsubseteq \{\, \Sigma^{C^{(1)}_{\sigma}}_{ \mathfrak{D}}\;\vert\; \sigma\in \overline{ \mathfrak{S}}^{{}^{pure}}\,\}$, and let us exhibit a contradiction. We denote by $\omega$ and $\omega'$ the elements of $\overline{ \mathfrak{S}}^{{}^{pure}}$ such that $\gamma \sqsubseteq_{{}_{\overline{ \mathfrak{S}}}} \Sigma^{C^{(1)}_{\omega}}_{ \mathfrak{D}}$ and $\gamma^\star \sqsubseteq_{{}_{\overline{ \mathfrak{S}}}}\Sigma^{C^{(1)}_{\omega'}}_{ \mathfrak{D}}$. There exists a maximal compatibility context of type 2 denoted $C$ such that ${ \mathfrak{l}}_{(\gamma,\gamma^\star)}\in C$.  Applying Theorem \ref{synthesiscontexts} to the maximal context $C$ we deduce easily that ${ \mathfrak{l}}_{(\gamma,\centerdot)}\in C$ and ${ \mathfrak{l}}_{(\centerdot,\gamma^\star)}\in C$. Due to the sheaf condition satisfied by ${ \mathfrak{D}}$, we know that
\begin{eqnarray}
&&{ \mathfrak{d}}^{C}({ \mathfrak{l}}_{(\gamma,\centerdot)}) = { \mathfrak{d}}^{C^{(1)}_{\omega}}({ \mathfrak{l}}_{(\gamma,\centerdot)}) = \epsilon^{ \mathfrak{S}}_{{ \mathfrak{l}}_{(\gamma,\centerdot)}}(\Sigma^{C^{(1)}_{\omega}}_{ \mathfrak{D}})=\textit{\bf Y}\label{absurd1a}\\
&&{ \mathfrak{d}}^{C}({ \mathfrak{l}}_{(\centerdot,\gamma^\star)}) = { \mathfrak{d}}^{C^{(1)}_{\omega'}}({ \mathfrak{l}}_{(\centerdot, \gamma^\star)}) = \epsilon^{ \mathfrak{S}}_{{ \mathfrak{l}}_{(\centerdot, \gamma^\star)}}(\Sigma^{C^{(1)}_{\omega'}}_{ \mathfrak{D}})=\textit{\bf N}\label{absurd1b}
\end{eqnarray}
but we have also
\begin{eqnarray}
&&{ \mathfrak{d}}^{C}({ \mathfrak{l}}_{(\gamma,\centerdot)})={ \mathfrak{d}}^{C}({ \mathfrak{Y}}_{{ \mathfrak{E}}_{ \mathfrak{S}}} \sqcap_{{}_{{ \mathfrak{E}}_{ \mathfrak{S}}}} { \mathfrak{l}}_{(\gamma,\gamma^\star)})=\textit{\bf Y} \wedge { \mathfrak{d}}^{C}({ \mathfrak{l}}_{(\gamma,\gamma^\star)})\label{absurd2a}
\\
&&{ \mathfrak{d}}^{C}({ \mathfrak{l}}_{(\centerdot,\gamma^\star)}) = { \mathfrak{d}}^{C}(\overline{{ \mathfrak{Y}}_{{ \mathfrak{E}}_{ \mathfrak{S}}}} \sqcap_{{}_{{ \mathfrak{E}}_{ \mathfrak{S}}}} { \mathfrak{l}}_{(\gamma,\gamma^\star)})=\textit{\bf N} \wedge { \mathfrak{d}}^{C}({ \mathfrak{l}}_{(\gamma,\gamma^\star)}).\label{absurd2b}
\end{eqnarray}
From (\ref{absurd1a}) and (\ref{absurd2a}) we deduce that necessarily ${ \mathfrak{d}}^{C}({ \mathfrak{l}}_{(\gamma,\gamma^\star)})=\textit{\bf Y}$, but using (\ref{absurd2b}) we deduce ${ \mathfrak{d}}^{C}({ \mathfrak{l}}_{(\centerdot,\gamma^\star)})=\textit{\bf N} \wedge \textit{\bf Y}=\bot$ which contradicts (\ref{absurd1b}). We have then obtained the announced contradiction. This concludes the proof.
\end{proof}

\begin{definition}\label{definitionoperationaldescriptionadmissible}
We will say that the operational description ${ \mathfrak{D}}$ is {\em admissible} iff $\{\, \Sigma^{C^{(1)}_{\omega}}_{ \mathfrak{D}}\;\vert\; \omega\in \overline{ \mathfrak{S}}^{{}^{pure}}\,\}$ is admissible, i.e. iff $\{\, \Sigma^{C^{(1)}_{\omega}}_{ \mathfrak{D}}\;\vert\; \omega\in \overline{ \mathfrak{S}}^{{}^{pure}}\,\}\in { \mathcal{Q}}_c(\overline{ \mathfrak{S}})$. %The set of admissible operational description will be denoted ${\mathcal{D}}_{{}_{Adm}}$.
\end{definition}

\begin{theorem}\label{theoremoperationaldescriptionadmissible}
Every operational description ${ \mathfrak{D}}$ are admissible.
\end{theorem}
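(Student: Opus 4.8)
The plan is to unravel the definition of admissibility and reduce it, via the three lemmas immediately preceding the statement, to a direct contradiction with Lemma~\ref{lemmanexistsomega1}. Write $U_{ \mathfrak{D}}:=\{\, \Sigma^{C^{(1)}_{\omega}}_{ \mathfrak{D}}\;\vert\; \omega\in \overline{ \mathfrak{S}}^{{}^{pure}}\,\}$. By Definition~\ref{definitionoperationaldescriptionadmissible}, showing that ${ \mathfrak{D}}$ is admissible amounts to showing $U_{ \mathfrak{D}}\in { \mathcal{Q}}_c(\overline{ \mathfrak{S}})$, and by Lemma~\ref{lemmasignifySinQ} this is equivalent to the statement that $\alpha^\star\not\sqsubseteq_{{}_{\overline{ \mathfrak{S}}}}\beta$ for every $\alpha,\beta\in cl_c^{\overline{ \mathfrak{S}}}(U_{ \mathfrak{D}})$.

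First I would invoke Lemma~\ref{lemmacU=U} to remove the closure: since $cl_c^{\overline{ \mathfrak{S}}}(U_{ \mathfrak{D}})=U_{ \mathfrak{D}}$, it suffices to prove $\alpha^\star\not\sqsubseteq_{{}_{\overline{ \mathfrak{S}}}}\beta$ for all $\alpha,\beta\in U_{ \mathfrak{D}}$. Then I would argue by contradiction: suppose there are $\alpha,\beta\in U_{ \mathfrak{D}}$ with $\alpha^\star\sqsubseteq_{{}_{\overline{ \mathfrak{S}}}}\beta$ (necessarily $\alpha\neq\bot_{{}_{\overline{ \mathfrak{S}}}}$, otherwise $\alpha^\star$ is undefined, and $\beta\neq\bot_{{}_{\overline{ \mathfrak{S}}}}$, since $\alpha^\star\in\overline{ \mathfrak{S}}\smallsetminus\{\bot_{{}_{\overline{ \mathfrak{S}}}}\}$). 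Taking $\gamma:=\alpha$ one has $\gamma\sqsubseteq_{{}_{\overline{ \mathfrak{S}}}}\alpha\in U_{ \mathfrak{D}}$ and $\gamma^\star=\alpha^\star\sqsubseteq_{{}_{\overline{ \mathfrak{S}}}}\beta\in U_{ \mathfrak{D}}$, that is $\{\gamma,\gamma^\star\}\sqsubseteq U_{ \mathfrak{D}}$ for the pre-order~(\ref{DefinitionpreorderPSbar}) (equivalently~(\ref{RECDefinitionpreorderPSbar})). This is exactly what Lemma~\ref{lemmanexistsomega1} forbids, so no such pair exists and ${ \mathfrak{D}}$ is admissible.

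The substantive content of the theorem thus sits entirely inside Lemmas~\ref{lemmacU=U} and~\ref{lemmanexistsomega1}: the real work is done by the consistency relation~(\ref{consitencyontic}) for the ontic state, which itself rests on the sheaf condition~(\ref{sheafdistributionrecollement}) and on the structure of type-$1$ contexts (Lemma~\ref{lemmacontexttype1}). The only additional point to settle when putting the pieces together is the harmless bottom-element bookkeeping, namely that the occurrence of $\bot_{{}_{\overline{ \mathfrak{S}}}}$ in $U_{ \mathfrak{D}}$ (which, by conditions~(\ref{sheafdistributioneta})--(\ref{sheafdistributionwedge}), can only arise when ${ \mathfrak{d}}^{C^{(1)}_{\omega}}$ is trivial on every ${ \mathfrak{l}}_{(\sigma,\centerdot)}$ with $\sigma\neq\bot_{{}_{\overline{ \mathfrak{S}}}}$) remains compatible with membership in ${ \mathcal{Q}}_c(\overline{ \mathfrak{S}})$, via its exceptional element $\{\bot_{{}_{\overline{ \mathfrak{S}}}}\}$ in Definition~\ref{definitioncompleterealspaceofstatesQc}. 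I expect this last piece of bookkeeping, rather than the logical skeleton above, to be the only genuinely fiddly part of the write-up.
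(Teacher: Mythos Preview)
Your proposal is correct and follows exactly the paper's approach: the paper's proof is the one-line ``Direct consequence of Lemma~\ref{lemmanexistsomega1} and Lemma~\ref{lemmacU=U}'', and you have simply spelled out the connective tissue (via Lemma~\ref{lemmasignifySinQ}) that makes this implication explicit. Your remark about the $\bot$-bookkeeping is a fair observation of a detail the paper does not address, but it is not part of the paper's argument.
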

\begin{proof}
Direct consequence of Lemma \ref{lemmanexistsomega1} and Lemma \ref{lemmacU=U}.
\end{proof}

\begin{lemma}\label{lemmasolutioncontext0}
As long as the operational description ${ \mathfrak{D}}$ is admissible (see Theorem \ref{theoremoperationaldescriptionadmissible}), the supremum
\begin{eqnarray}
\Sigma_{ \mathfrak{D}} &:=& \bigsqcup{}^{{}^{{ \mathfrak{S}}}}_{\omega\in \overline{ \mathfrak{S}}^{{}^{pure}}}\; \Sigma^{C^{(1)}_{\omega}}_{ \mathfrak{D}}\label{solutionontic1}
\end{eqnarray}
 is well defined as an element of ${ \mathfrak{S}}$.  
Then,we have
\begin{eqnarray}
\forall \omega \in \overline{ \mathfrak{S}}^{{}^{pure}},&&\Sigma^{C^{(1)}_{\omega}}_{ \mathfrak{D}} = \omega \sqcap_{{}_{{ \mathfrak{S}}}} \Sigma_{ \mathfrak{D}}\label{solutionontic2}
\end{eqnarray}
\end{lemma}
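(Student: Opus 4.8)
The plan is to exploit the fact, established in Theorem~\ref{theoremoperationaldescriptionadmissible}, that $U_{ \mathfrak{D}}:=\{\, \Sigma^{C^{(1)}_{\omega}}_{ \mathfrak{D}}\;\vert\; \omega\in \overline{ \mathfrak{S}}^{{}^{pure}}\,\}$ is admissible, i.e.\ lies in ${ \mathcal{Q}}_c(\overline{ \mathfrak{S}})$. By the very construction of the ontic completion $(\!(\overline{ \mathfrak{S}},\star)\!)_c$ and of the map $\Theta^{\overline{ \mathfrak{S}}}$ (which is an isomorphism from ${ \mathfrak{S}}$ onto ${ \mathfrak{J}}^c_{\overline{ \mathfrak{S}}}$), the closure $cl_c^{\overline{ \mathfrak{S}}}(U_{ \mathfrak{D}})$ is an element of ${ \mathfrak{J}}^c_{\overline{ \mathfrak{S}}}$, hence there is a unique $\Sigma_{ \mathfrak{D}}\in { \mathfrak{S}}$ with $\Theta^{\overline{ \mathfrak{S}}}(\Sigma_{ \mathfrak{D}})=cl_c^{\overline{ \mathfrak{S}}}(U_{ \mathfrak{D}})$. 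The supremum $\bigsqcup{}^{{}^{{ \mathfrak{S}}}}_{\omega}\Sigma^{C^{(1)}_{\omega}}_{ \mathfrak{D}}$ written in (\ref{solutionontic1}) is then exactly $\Lambda^{\overline{ \mathfrak{S}}}(U_{ \mathfrak{D}})=\Lambda^{\overline{ \mathfrak{S}}}(cl_c^{\overline{ \mathfrak{S}}}(U_{ \mathfrak{D}}))$, which is well-defined precisely because $U_{ \mathfrak{D}}\in \widehat{\mathcal{P}(\overline{ \mathfrak{S}})}$ (its elements have a common upper bound in ${ \mathfrak{S}}$ once $U_{ \mathfrak{D}}$ is admissible). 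So the first step is simply to record that admissibility guarantees the existence of $\Sigma_{ \mathfrak{D}}$ in ${ \mathfrak{S}}$, and that $\Theta^{\overline{ \mathfrak{S}}}(\Sigma_{ \mathfrak{D}})=cl_c^{\overline{ \mathfrak{S}}}(U_{ \mathfrak{D}})$.

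Next I would prove the identity (\ref{solutionontic2}). Fix $\omega\in \overline{ \mathfrak{S}}^{{}^{pure}}$. The element $\Theta^{\overline{ \mathfrak{S}}}(\omega \sqcap_{{}_{{ \mathfrak{S}}}}\Sigma_{ \mathfrak{D}})$ equals, by Lemma~\ref{infimumJ} / (\ref{RECdefJsqbarJ}), $\{\omega\}\sqcap_{{}_{{ \mathfrak{J}}^c_{\overline{ \mathfrak{S}}}}} cl_c^{\overline{ \mathfrak{S}}}(U_{ \mathfrak{D}}) = Max\{\, \omega\sqcap_{{}_{\overline{ \mathfrak{S}}}}\chi\;\vert\; \chi\in cl_c^{\overline{ \mathfrak{S}}}(U_{ \mathfrak{D}})\,\}$. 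By Lemma~\ref{lemmacU=U} we already know $cl_c^{\overline{ \mathfrak{S}}}(U_{ \mathfrak{D}})=U_{ \mathfrak{D}}$, so this is $Max\{\,\omega\sqcap_{{}_{\overline{ \mathfrak{S}}}}\Sigma^{C^{(1)}_{\omega'}}_{ \mathfrak{D}}\;\vert\;\omega'\in\overline{ \mathfrak{S}}^{{}^{pure}}\,\}$. Now the consistency property (\ref{consitencyontic}) gives $\omega\sqcap_{{}_{\overline{ \mathfrak{S}}}}\Sigma^{C^{(1)}_{\omega'}}_{ \mathfrak{D}} \sqsubseteq_{{}_{\overline{ \mathfrak{S}}}} \Sigma^{C^{(1)}_{\omega}}_{ \mathfrak{D}}$ for \emph{every} $\omega'$, while choosing $\omega'=\omega$ (and using $\Sigma^{C^{(1)}_{\omega}}_{ \mathfrak{D}}\sqsubseteq_{{}_{\overline{ \mathfrak{S}}}}\omega$, which is immediate from the definition of the ontic state and the form of $C^{(1)}_{\omega}$) yields $\omega\sqcap_{{}_{\overline{ \mathfrak{S}}}}\Sigma^{C^{(1)}_{\omega}}_{ \mathfrak{D}}=\Sigma^{C^{(1)}_{\omega}}_{ \mathfrak{D}}$. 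Hence the maximum of this family is attained at $\omega'=\omega$ and equals $\Sigma^{C^{(1)}_{\omega}}_{ \mathfrak{D}}$, so $\Theta^{\overline{ \mathfrak{S}}}(\omega\sqcap_{{}_{{ \mathfrak{S}}}}\Sigma_{ \mathfrak{D}})=\{\Sigma^{C^{(1)}_{\omega}}_{ \mathfrak{D}}\}=\Theta^{\overline{ \mathfrak{S}}}(\Sigma^{C^{(1)}_{\omega}}_{ \mathfrak{D}})$. Injectivity of $\Theta^{\overline{ \mathfrak{S}}}$ then gives (\ref{solutionontic2}).

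I expect the main subtlety to be the careful bookkeeping connecting the ``abstract'' supremum (\ref{solutionontic1}) taken in ${ \mathfrak{S}}$ with the closure $cl_c^{\overline{ \mathfrak{S}}}(U_{ \mathfrak{D}})$ computed inside ${ \mathfrak{J}}^c_{\overline{ \mathfrak{S}}}$ — one must be sure that $\Lambda^{\overline{ \mathfrak{S}}}(U_{ \mathfrak{D}})=\Lambda^{\overline{ \mathfrak{S}}}(cl_c^{\overline{ \mathfrak{S}}}(U_{ \mathfrak{D}}))$, which follows from the Galois-connection property $\Lambda^{\overline{ \mathfrak{S}}}\circ cl^{\overline{ \mathfrak{S}}}=\Lambda^{\overline{ \mathfrak{S}}}$, and that the infimum formula of Lemma~\ref{infimumJ} is applicable (it is, since $\{\omega\}\in { \mathfrak{J}}^c_{\overline{ \mathfrak{S}}}$ and $cl_c^{\overline{ \mathfrak{S}}}(U_{ \mathfrak{D}})\in { \mathfrak{J}}^c_{\overline{ \mathfrak{S}}}$). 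Everything else is a direct rewriting of (\ref{consitencyontic}) combined with the already-proved stability $cl_c^{\overline{ \mathfrak{S}}}(U_{ \mathfrak{D}})=U_{ \mathfrak{D}}$ from Lemma~\ref{lemmacU=U}, so no genuinely new estimate is needed.
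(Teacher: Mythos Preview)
Your proof is correct and follows essentially the same approach as the paper. Both arguments use admissibility to get $\Sigma_{ \mathfrak{D}}\in{ \mathfrak{S}}$, invoke the consistency relation (\ref{consitencyontic}) together with $\Sigma^{C^{(1)}_{\omega}}_{ \mathfrak{D}}\sqsubseteq_{{}_{\overline{ \mathfrak{S}}}}\omega$, and conclude (\ref{solutionontic2}); the only difference is that you compute $\Theta^{\overline{ \mathfrak{S}}}(\omega\sqcap_{{}_{{ \mathfrak{S}}}}\Sigma_{ \mathfrak{D}})$ explicitly via Lemma~\ref{infimumJ} and Lemma~\ref{lemmacU=U} and then appeal to injectivity of $\Theta^{\overline{ \mathfrak{S}}}$, whereas the paper argues the two inequalities $\omega\sqcap_{{}_{{ \mathfrak{S}}}}\Sigma_{ \mathfrak{D}}\sqsubseteq\Sigma^{C^{(1)}_{\omega}}_{ \mathfrak{D}}$ and $\Sigma^{C^{(1)}_{\omega}}_{ \mathfrak{D}}\sqsubseteq\omega\sqcap_{{}_{{ \mathfrak{S}}}}\Sigma_{ \mathfrak{D}}$ directly at the level of the order in ${ \mathfrak{S}}$ (implicitly relying on the same infimum description).
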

\begin{proof}
From Theorem \ref{theoremoperationaldescriptionadmissible},  we know that 
$\{\, \Sigma^{C^{(1)}_{\omega}}_{ \mathfrak{D}}\;\vert\; \omega\in \overline{ \mathfrak{S}}^{{}^{pure}}\,\}$ is in ${ \mathcal{Q}}_c(\overline{ \mathfrak{S}})$ and, then, 
\begin{eqnarray}
\Sigma_{ \mathfrak{D}} &:=& cl_c^{{{ \mathfrak{S}}'}}(\{\, \Sigma^{C^{(1)}_{\omega}}_{ \mathfrak{D}}\;\vert\; \omega\in \overline{ \mathfrak{S}}^{{}^{pure}}\,\})\\
&=&\bigsqcup{}^{{}^{{ \mathfrak{S}}}}_{\omega\in \overline{ \mathfrak{S}}^{{}^{pure}}}\; \Sigma^{C^{(1)}_{\omega}}_{ \mathfrak{D}}
\end{eqnarray}
 is well defined as an element of ${ \mathfrak{S}}$.  \\
Then, $\forall \omega' \in \overline{ \mathfrak{S}}^{{}^{pure}}, \; \omega \sqcap_{{}_{{ \mathfrak{S}}}} \Sigma^{C^{(1)}_{\omega'}}_{ \mathfrak{D}}  \sqsubseteq_{{}_{{ \mathfrak{S}}}}   \Sigma^{C^{(1)}_{\omega}}_{ \mathfrak{D}}$ implies $\omega \sqcap_{{}_{{ \mathfrak{S}}}} \Sigma_{ \mathfrak{D}}  \sqsubseteq_{{}_{{ \mathfrak{S}}}}   \Sigma^{C^{(1)}_{\omega}}_{ \mathfrak{D}}$. \\
On another part, $\Sigma^{C^{(1)}_{\omega}}_{ \mathfrak{D}}  \sqsubseteq_{{}_{{ \mathfrak{S}}}}   \Sigma_{ \mathfrak{D}}$ implies $\omega \sqcap_{{}_{{ \mathfrak{S}}}} \Sigma^{C^{(1)}_{\omega}}_{ \mathfrak{D}}  \sqsubseteq_{{}_{{ \mathfrak{S}}}} \omega \sqcap_{{}_{{ \mathfrak{S}}}} \Sigma_{ \mathfrak{D}}$.\\
Endly, we recall that $\omega \sqcap_{{}_{{ \mathfrak{S}}}} \Sigma^{C^{(1)}_{\omega}}_{ \mathfrak{D}} = \Sigma^{C^{(1)}_{\omega}}_{ \mathfrak{D}}$.\\
As a conclusion, we obtain $\Sigma^{C^{(1)}_{\omega}}_{ \mathfrak{D}} = \omega \sqcap_{{}_{{ \mathfrak{S}}}} \Sigma_{ \mathfrak{D}}$.\\
Conversely, we easily check that the expression (\ref{solutionontic2}) satisfies the constraints (\ref{consitencyontic}).
\end{proof}

\begin{lemma}\label{lemmasolutioncontext1}
Let us consider a maximal compatibility context of type 1 denoted $C^{(1)}_{\omega}$.  We have
\begin{eqnarray}
\forall { \mathfrak{l}}\in C^{(1)}_{\omega},&& { \mathfrak{d}}^{C^{(1)}_{\omega}}({ \mathfrak{l}})=\epsilon^{ \mathfrak{S}}_{ \mathfrak{l}}(\Sigma_{ \mathfrak{D}}).
\end{eqnarray}
\end{lemma}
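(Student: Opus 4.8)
The goal is to show that for a maximal compatibility context of Type 1, say $C^{(1)}_\omega$ with $\omega \in \overline{ \mathfrak{S}}^{{}^{pure}}$, the value assigned by the operational description agrees with the evaluation map applied at the ``global'' state $\Sigma_{ \mathfrak{D}}$ constructed in Lemma \ref{lemmasolutioncontext0}. The plan is to combine the already-established facts: (i) from Lemma \ref{lemmasolutioncontext0} we know $\Sigma^{C^{(1)}_{\omega}}_{ \mathfrak{D}} = \omega \sqcap_{{}_{{ \mathfrak{S}}}} \Sigma_{ \mathfrak{D}}$, and (ii) by the defining property (\ref{defonticstate1}) of the ontic state, ${ \mathfrak{m}}_{ \mathfrak{l}}(\Sigma^{C^{(1)}_{\omega}}_{ \mathfrak{D}})={ \mathfrak{d}}^{C^{(1)}_\omega}({ \mathfrak{l}})$, i.e. $\epsilon^{ \mathfrak{S}}_{ \mathfrak{l}}(\Sigma^{C^{(1)}_{\omega}}_{ \mathfrak{D}})={ \mathfrak{d}}^{C^{(1)}_\omega}({ \mathfrak{l}})$ for all ${ \mathfrak{l}}\in C^{(1)}_\omega$. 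So it suffices to show that for every ${ \mathfrak{l}}\in C^{(1)}_\omega$ we have $\epsilon^{ \mathfrak{S}}_{ \mathfrak{l}}(\omega \sqcap_{{}_{{ \mathfrak{S}}}}\Sigma_{ \mathfrak{D}}) = \epsilon^{ \mathfrak{S}}_{ \mathfrak{l}}(\Sigma_{ \mathfrak{D}})$.

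First I would recall the explicit description of a Type 1 maximal context from Lemma \ref{lemmacontexttype1}: $C^{(1)}_\omega = \{\, { \mathfrak{l}}_{(\sigma,\centerdot)}\;\vert\; \sigma\sqsubseteq_{{}_{{ \mathfrak{S}}}} \omega\,\}\cup \{\, { \mathfrak{l}}_{(\centerdot,\sigma)}\;\vert\; \sigma\sqsubseteq_{{}_{{ \mathfrak{S}}}} \omega\,\}\cup \{\, { \mathfrak{l}}_{(\centerdot,\centerdot)}\,\}$. So an arbitrary effect in $C^{(1)}_\omega$ is either ${ \mathfrak{l}}_{(\sigma,\centerdot)}$ or ${ \mathfrak{l}}_{(\centerdot,\sigma)}$ with $\sigma\sqsubseteq_{{}_{{ \mathfrak{S}}}}\omega$, or $\bot_{{ \mathfrak{E}}_{ \mathfrak{S}}}$. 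For the last one $\epsilon$ is always $\bot$, trivially matching. For ${ \mathfrak{l}}_{(\centerdot,\sigma)}$ with $\sigma\sqsubseteq_{{}_{{ \mathfrak{S}}}}\omega$: by the definition (\ref{defepsilonS}) of $\epsilon^{ \mathfrak{S}}$, $\epsilon^{ \mathfrak{S}}_{{ \mathfrak{l}}_{(\centerdot,\sigma)}}(\xi)=\textit{\bf N}$ iff $\sigma\sqsubseteq_{{}_{{ \mathfrak{S}}}}\xi$ and $\bot$ otherwise (there is no ``$\textit{\bf Y}$'' branch since the first component is $\centerdot$). So I need $\sigma\sqsubseteq_{{}_{{ \mathfrak{S}}}}(\omega\sqcap_{{}_{{ \mathfrak{S}}}}\Sigma_{ \mathfrak{D}}) \Leftrightarrow \sigma\sqsubseteq_{{}_{{ \mathfrak{S}}}}\Sigma_{ \mathfrak{D}}$; the forward implication is immediate and the backward uses $\sigma\sqsubseteq_{{}_{{ \mathfrak{S}}}}\omega$. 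The case ${ \mathfrak{l}}_{(\sigma,\centerdot)}$ is the symmetric one: $\epsilon^{ \mathfrak{S}}_{{ \mathfrak{l}}_{(\sigma,\centerdot)}}(\xi)=\textit{\bf Y}$ iff $\sigma\sqsubseteq_{{}_{{ \mathfrak{S}}}}\xi$, and the same equivalence settles it.

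Hence for every ${ \mathfrak{l}}\in C^{(1)}_\omega$, $\epsilon^{ \mathfrak{S}}_{ \mathfrak{l}}(\Sigma^{C^{(1)}_{\omega}}_{ \mathfrak{D}})=\epsilon^{ \mathfrak{S}}_{ \mathfrak{l}}(\omega\sqcap_{{}_{{ \mathfrak{S}}}}\Sigma_{ \mathfrak{D}})=\epsilon^{ \mathfrak{S}}_{ \mathfrak{l}}(\Sigma_{ \mathfrak{D}})$, and combining with ${ \mathfrak{d}}^{C^{(1)}_\omega}({ \mathfrak{l}})=\epsilon^{ \mathfrak{S}}_{ \mathfrak{l}}(\Sigma^{C^{(1)}_{\omega}}_{ \mathfrak{D}})$ gives the claim. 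I do not anticipate a genuine obstacle here; the only subtlety to handle carefully is to verify that the effects of $C^{(1)}_\omega$ are exactly of the three listed forms (so that we never hit a branch of $\epsilon^{ \mathfrak{S}}$ that would require comparing with a component that is not $\centerdot$), and to note that the key ``localization'' identity $\sigma\sqsubseteq_{{}_{{ \mathfrak{S}}}}\omega \Rightarrow (\sigma\sqsubseteq_{{}_{{ \mathfrak{S}}}}\Sigma_{ \mathfrak{D}} \Leftrightarrow \sigma\sqsubseteq_{{}_{{ \mathfrak{S}}}}\omega\sqcap_{{}_{{ \mathfrak{S}}}}\Sigma_{ \mathfrak{D}})$ is just elementary Inf semi-lattice manipulation. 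The whole argument is therefore short: quote Lemma \ref{lemmasolutioncontext0}, quote (\ref{defonticstate1}), expand the context via Lemma \ref{lemmacontexttype1}, and do the two-line case check on $\epsilon^{ \mathfrak{S}}$.
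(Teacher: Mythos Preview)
Your proof is correct and follows essentially the same approach as the paper: both reduce to showing $\epsilon^{ \mathfrak{S}}_{ \mathfrak{l}}(\omega \sqcap_{{}_{{ \mathfrak{S}}}}\Sigma_{ \mathfrak{D}}) = \epsilon^{ \mathfrak{S}}_{ \mathfrak{l}}(\Sigma_{ \mathfrak{D}})$ for ${ \mathfrak{l}}\in C^{(1)}_\omega$ via Lemma \ref{lemmasolutioncontext0} and the ontic-state property (\ref{defonticstate1}), then perform a case check over the three forms of effects in $C^{(1)}_\omega$. The only cosmetic difference is that the paper expresses the case check through the homomorphic property $\epsilon^{ \mathfrak{S}}_{ \mathfrak{l}}(\omega\sqcap\Sigma_{ \mathfrak{D}})=\epsilon^{ \mathfrak{S}}_{ \mathfrak{l}}(\omega)\wedge\epsilon^{ \mathfrak{S}}_{ \mathfrak{l}}(\Sigma_{ \mathfrak{D}})$ together with $\epsilon^{ \mathfrak{S}}_{{ \mathfrak{l}}_{(\alpha,\centerdot)}}(\Sigma_{ \mathfrak{D}})\in\{\textit{\bf Y},\bot\}$, whereas you unpack the same equality directly from the order-theoretic definition of $\epsilon^{ \mathfrak{S}}$; the content is identical.
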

\begin{proof}
Let us compute ${ \mathfrak{d}}^{C^{(1)}_{\omega}}({ \mathfrak{l}})$ for any ${ \mathfrak{l}}\in C^{(1)}_{\omega}$.  First of all, we note that, for any $\alpha \sqsubseteq_{{}_{ \mathfrak{S}}}\omega$, we have
\begin{eqnarray}
&&\epsilon^{ \mathfrak{S}}_{{ \mathfrak{l}}_{(\alpha,\centerdot)}}(\Sigma_{ \mathfrak{D}})=\epsilon^{ \mathfrak{S}}_{{ \mathfrak{Y}}_{{ \mathfrak{E}}_{ \mathfrak{S}}} \sqcap_{{}_{{ \mathfrak{E}}_{ \mathfrak{S}}}}
{ \mathfrak{l}}_{(\alpha,\centerdot)}}(\Sigma_{ \mathfrak{D}})=\textit{\bf Y}\wedge \epsilon^{ \mathfrak{S}}_{{ \mathfrak{l}}_{(\alpha,\centerdot)}}(\Sigma_{ \mathfrak{D}})
\end{eqnarray}
and then 
\begin{eqnarray}
\epsilon^{ \mathfrak{S}}_{{ \mathfrak{l}}_{(\alpha,\centerdot)}}(\Sigma_{ \mathfrak{D}}) & \in & \{\textit{\bf Y},\bot\}.
\end{eqnarray}
As a consequence, we have
\begin{eqnarray}
{ \mathfrak{d}}^{C^{(1)}_{\omega}}({ \mathfrak{l}}_{(\alpha,\centerdot)})=\epsilon^{ \mathfrak{S}}_{{ \mathfrak{l}}_{(\alpha,\centerdot)}}(\Sigma^{C^{(1)}_{\omega}}_{ \mathfrak{D}})=\epsilon^{ \mathfrak{S}}_{{ \mathfrak{l}}_{(\alpha,\centerdot)}}(\omega \sqcap_{{}_{ \mathfrak{S}}}
\Sigma_{ \mathfrak{D}})=\textit{\bf Y}\wedge
\epsilon^{ \mathfrak{S}}_{{ \mathfrak{l}}_{(\alpha,\centerdot)}}(\Sigma_{ \mathfrak{D}})=\epsilon^{ \mathfrak{S}}_{{ \mathfrak{l}}_{(\alpha,\centerdot)}}(\Sigma_{ \mathfrak{D}}). \;\;\;\;\;\;\;\;\;\;
\end{eqnarray}
We can check analogously 
\begin{eqnarray}
{ \mathfrak{d}}^{C^{(1)}_{\omega}}({ \mathfrak{l}}_{(\centerdot,\alpha)})=\epsilon^{ \mathfrak{S}}_{{ \mathfrak{l}}_{(\centerdot,\alpha)}}(\Sigma_{ \mathfrak{D}}). 
\end{eqnarray}
\end{proof}

\begin{lemma}\label{lemmasolutioncontext2}
Let us consider a maximal compatibility context of type 2 denoted by $C$. We have
\begin{eqnarray}
\forall { \mathfrak{l}}\in C, && { \mathfrak{d}}^{C}({ \mathfrak{l}}) = \epsilon^{ \mathfrak{S}}_{{ \mathfrak{l}}}(\Sigma_{ \mathfrak{D}}).
\end{eqnarray}
\end{lemma}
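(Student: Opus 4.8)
The plan is to reduce the claim to the already-settled Type~1 case (Lemma~\ref{lemmasolutioncontext1}) together with the elementary arithmetic of the boolean domain $\mathfrak{B}$. First note that $\Sigma_{ \mathfrak{D}}$ is a well-defined element of ${ \mathfrak{S}}$ by Lemma~\ref{lemmasolutioncontext0}, since every operational description is admissible (Theorem~\ref{theoremoperationaldescriptionadmissible}). Fix a maximal compatibility context $C$ of type~2. Every ${ \mathfrak{l}}\in C\subseteq \overline{ \mathfrak{E}}_{ \mathfrak{S}}$ is either an element of ${ \mathfrak{U}}_{\overline{ \mathfrak{S}}}$ — i.e. of the form ${ \mathfrak{l}}_{(\sigma,\centerdot)}$, ${ \mathfrak{l}}_{(\centerdot,\sigma)}$ or ${ \mathfrak{l}}_{(\centerdot,\centerdot)}$ — or a ``genuine'' effect ${ \mathfrak{l}}_{(\sigma,\sigma')}$ with $\sigma,\sigma'\in \overline{ \mathfrak{S}}\smallsetminus\{\bot_{{}_{ \mathfrak{S}}}\}$ and $\sigma'\sqsupseteq_{{}_{\overline{ \mathfrak{S}}}}\sigma^\star$; I will treat these two families separately.

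For the ${ \mathfrak{U}}_{\overline{ \mathfrak{S}}}$-part: if ${ \mathfrak{l}}_{(\sigma,\centerdot)}\in C$, I choose any $\omega\in \overline{ \mathfrak{S}}{}^{{}^{pure}}$ with $\omega\sqsupseteq_{{}_{\overline{ \mathfrak{S}}}}\sigma$; then $C^{(1)}_\omega$ is a maximal compatibility context of type~1 (hence an element of ${ \mathcal{C}}$) containing ${ \mathfrak{l}}_{(\sigma,\centerdot)}$, so the sheaf condition (\ref{sheafdistributionrecollement}) together with Lemma~\ref{lemmasolutioncontext1} yields ${ \mathfrak{d}}^C({ \mathfrak{l}}_{(\sigma,\centerdot)})={ \mathfrak{d}}^{C^{(1)}_\omega}({ \mathfrak{l}}_{(\sigma,\centerdot)})=\epsilon^{ \mathfrak{S}}_{{ \mathfrak{l}}_{(\sigma,\centerdot)}}(\Sigma_{ \mathfrak{D}})$. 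The same argument, with a pure $\omega\sqsupseteq_{{}_{\overline{ \mathfrak{S}}}}\sigma$, handles ${ \mathfrak{l}}_{(\centerdot,\sigma)}$, and ${ \mathfrak{l}}_{(\centerdot,\centerdot)}=\bot_{{}_{{ \mathfrak{E}}_{ \mathfrak{S}}}}$ is immediate since both sides equal $\bot$.

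Now take a genuine effect ${ \mathfrak{l}}_{(\sigma,\sigma')}\in C$. By Theorem~\ref{synthesiscontexts} (closure of $C$ under $\sqcap_{{}_{{ \mathfrak{E}}_{ \mathfrak{S}}}}$, together with ${ \mathfrak{Y}}_{{ \mathfrak{E}}_{ \mathfrak{S}}},\overline{{ \mathfrak{Y}}_{{ \mathfrak{E}}_{ \mathfrak{S}}}}\in C$), both ${ \mathfrak{l}}_{(\sigma,\centerdot)}={ \mathfrak{l}}_{(\sigma,\sigma')}\sqcap_{{}_{{ \mathfrak{E}}_{ \mathfrak{S}}}}{ \mathfrak{Y}}_{{ \mathfrak{E}}_{ \mathfrak{S}}}$ and ${ \mathfrak{l}}_{(\centerdot,\sigma')}={ \mathfrak{l}}_{(\sigma,\sigma')}\sqcap_{{}_{{ \mathfrak{E}}_{ \mathfrak{S}}}}\overline{{ \mathfrak{Y}}_{{ \mathfrak{E}}_{ \mathfrak{S}}}}$ lie in $C\cap { \mathfrak{U}}_{\overline{ \mathfrak{S}}}$, hence are covered by the previous paragraph. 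Using (\ref{sheafdistributionwedge}), (\ref{sheafdistributionbar}) and (\ref{sheafdistributioneta}) one computes ${ \mathfrak{d}}^C({ \mathfrak{l}}_{(\sigma,\centerdot)})={ \mathfrak{d}}^C({ \mathfrak{l}}_{(\sigma,\sigma')})\wedge \textit{\bf Y}$ and ${ \mathfrak{d}}^C({ \mathfrak{l}}_{(\centerdot,\sigma')})={ \mathfrak{d}}^C({ \mathfrak{l}}_{(\sigma,\sigma')})\wedge \textit{\bf N}$, so the pair $\bigl({ \mathfrak{d}}^C({ \mathfrak{l}}_{(\sigma,\centerdot)}),{ \mathfrak{d}}^C({ \mathfrak{l}}_{(\centerdot,\sigma')})\bigr)$ equals $(\textit{\bf Y},\bot)$, $(\bot,\textit{\bf N})$ or $(\bot,\bot)$ according as ${ \mathfrak{d}}^C({ \mathfrak{l}}_{(\sigma,\sigma')})$ is $\textit{\bf Y}$, $\textit{\bf N}$ or $\bot$, and this correspondence is a bijection. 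Precisely the same trichotomy holds, by (\ref{defepsilonS}), for $\epsilon^{ \mathfrak{S}}_{{ \mathfrak{l}}_{(\sigma,\sigma')}}(\Sigma_{ \mathfrak{D}})$ with respect to $\bigl(\epsilon^{ \mathfrak{S}}_{{ \mathfrak{l}}_{(\sigma,\centerdot)}}(\Sigma_{ \mathfrak{D}}),\epsilon^{ \mathfrak{S}}_{{ \mathfrak{l}}_{(\centerdot,\sigma')}}(\Sigma_{ \mathfrak{D}})\bigr)$: the only point needing attention is that $\epsilon^{ \mathfrak{S}}_{{ \mathfrak{l}}_{(\sigma,\centerdot)}}(\Sigma_{ \mathfrak{D}})=\textit{\bf Y}$ and $\epsilon^{ \mathfrak{S}}_{{ \mathfrak{l}}_{(\centerdot,\sigma')}}(\Sigma_{ \mathfrak{D}})=\textit{\bf N}$ cannot hold simultaneously, for otherwise $\sigma\sqsubseteq_{{}_{{ \mathfrak{S}}}}\Sigma_{ \mathfrak{D}}$ and $\sigma^\star\sqsubseteq_{{}_{\overline{ \mathfrak{S}}}}\sigma'\sqsubseteq_{{}_{{ \mathfrak{S}}}}\Sigma_{ \mathfrak{D}}$ would make $\Sigma_{ \mathfrak{D}}$ a common upper bound of $\sigma$ and $\sigma^\star$, contradicting the defining property (\ref{starcomplement}) of a real structure. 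Since the two pairs coincide by the ${ \mathfrak{U}}_{\overline{ \mathfrak{S}}}$-part, I conclude ${ \mathfrak{d}}^C({ \mathfrak{l}}_{(\sigma,\sigma')})=\epsilon^{ \mathfrak{S}}_{{ \mathfrak{l}}_{(\sigma,\sigma')}}(\Sigma_{ \mathfrak{D}})$, which completes the argument.

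The main obstacle I anticipate is not depth but care: one must check that the two three-valued case distinctions — one governing ${ \mathfrak{d}}^C$, one governing $\epsilon^{ \mathfrak{S}}(\Sigma_{ \mathfrak{D}})$ — are ruled by the \emph{same} function of the two ``halves'' ${ \mathfrak{l}}_{(\sigma,\centerdot)}$ and ${ \mathfrak{l}}_{(\centerdot,\sigma')}$, which rests simultaneously on the $\wedge$-homomorphism property of operational descriptions and on the incompatibility $\neg\,\widehat{\sigma\sigma^\star}{}^{{}^{{ \mathfrak{S}}}}$; and one must confirm that for a pure $\omega$ the set $C^{(1)}_\omega$ is genuinely a maximal compatibility context lying in ${ \mathcal{C}}$ (no Type~2 context can absorb all of it, since by Lemma~\ref{lemmacontexttype2a} that would force a bottom element into the role of a $\star$-argument), so that the sheaf condition legitimately applies. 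The rest is routine bookkeeping with the boolean-domain operations.
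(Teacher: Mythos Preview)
Your proof is correct and follows essentially the same approach as the paper's: both arguments split a genuine effect ${ \mathfrak{l}}_{(\sigma,\sigma')}$ into its ``halves'' ${ \mathfrak{l}}_{(\sigma,\centerdot)}={ \mathfrak{l}}_{(\sigma,\sigma')}\sqcap { \mathfrak{Y}}$ and ${ \mathfrak{l}}_{(\centerdot,\sigma')}={ \mathfrak{l}}_{(\sigma,\sigma')}\sqcap \overline{{ \mathfrak{Y}}}$, transfer each half to a Type~1 context $C^{(1)}_\omega$ via the sheaf condition, apply Lemma~\ref{lemmasolutioncontext1}, and then reconstruct the full value from the pair $(x\wedge\textit{\bf Y},\,x\wedge\textit{\bf N})$. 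The only cosmetic difference is that the paper fixes once and for all a pair of pure states $\omega,\omega'$ coming from the structural description of $C\cap{ \mathfrak{U}}_{\overline{\mathfrak{S}}}$ in Lemma~\ref{lemmacontexttype2a}, whereas you pick a fresh pure $\omega\sqsupseteq\sigma$ for each effect; your trichotomy argument (including the explicit exclusion of the $(\textit{\bf Y},\textit{\bf N})$ case via $\neg\,\widehat{\sigma\sigma^\star}{}^{{}^{{ \mathfrak{S}}}}$) is just a more spelled-out version of the paper's ``from these two relations we deduce immediately''.
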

\begin{proof}
Let us consider a maximal compatibility context of type 2 denoted by $C$ and let us choose $\lambda,\lambda'\in \overline{ \mathfrak{S}} \smallsetminus \{\bot_{{}_{\overline{ \mathfrak{S}}}}\}$ such that $\lambda' \sqsupseteq_{{}_{\overline{ \mathfrak{S}}}} \lambda^\star$ and $\omega,\omega'\in  {\overline{ \mathfrak{S}}}^{{}^{pure}}$ such that $\omega  \sqsupseteq_{{}_{\overline{ \mathfrak{S}}}} \lambda, \omega'  \sqsupseteq_{{}_{\overline{ \mathfrak{S}}}} \lambda'$ with 
\begin{eqnarray}
&& (C\cap { \mathfrak{U}}_{\overline{ \mathfrak{S}}})\subseteq \left( \{\, { \mathfrak{l}}_{(\sigma,\centerdot)}\;\vert\; \lambda \sqsubseteq_{{}_{{ \mathfrak{S}}}} \sigma\sqsubseteq_{{}_{{ \mathfrak{S}}}} \omega\,\}\cup \{\, { \mathfrak{l}}_{(\centerdot,\sigma)}\;\vert\; \lambda \sqsubseteq_{{}_{{ \mathfrak{S}}}}\sigma\sqsubseteq_{{}_{{ \mathfrak{S}}}} \omega\,\}\cup \{\, { \mathfrak{l}}_{(\centerdot,\centerdot)}\,\}\cup \right.\nonumber\\
&&\hspace{1cm}\left.  \{\, { \mathfrak{l}}_{(\sigma,\centerdot)}\;\vert\; \lambda' \sqsubseteq_{{}_{{ \mathfrak{S}}}} \sigma\sqsubseteq_{{}_{{ \mathfrak{S}}}} \omega'\,\}\cup \{\, { \mathfrak{l}}_{(\centerdot,\sigma)}\;\vert\; \lambda' \sqsubseteq_{{}_{{ \mathfrak{S}}}}\sigma\sqsubseteq_{{}_{{ \mathfrak{S}}}} \omega'\,\}\right)
\end{eqnarray}
We note that
\begin{eqnarray}
\forall { \mathfrak{l}}_{(\alpha,\alpha')}\in C, && \left( \omega \sqsupseteq_{{}_{\overline{ \mathfrak{S}}}} \alpha \sqsupseteq_{{}_{\overline{ \mathfrak{S}}}} \lambda \;\;\textit{\rm and}\;\; \omega' \sqsupseteq_{{}_{\overline{ \mathfrak{S}}}} \alpha' \sqsupseteq_{{}_{\overline{ \mathfrak{S}}}} \lambda'\right).
\end{eqnarray}
We also note that, for any ${ \mathfrak{l}}_{(\alpha,\alpha')}\in C$, we have
\begin{eqnarray}
\textit{\bf Y}\wedge { \mathfrak{d}}^{C}({ \mathfrak{l}}_{(\alpha,\alpha')}) &=&{ \mathfrak{d}}^{C}({ \mathfrak{Y}}_{{ \mathfrak{E}}_{ \mathfrak{S}}} \sqcap_{{}_{{ \mathfrak{E}}_{ \mathfrak{S}}}}
{ \mathfrak{l}}_{(\alpha,\alpha')})\nonumber\\
&=& { \mathfrak{d}}^{C}({ \mathfrak{l}}_{(\alpha,\centerdot)})\nonumber\\
&=& { \mathfrak{d}}^{C^{(1)}_\omega}({ \mathfrak{l}}_{(\alpha,\centerdot)})\nonumber\\
&=& \epsilon^{ \mathfrak{S}}_{{ \mathfrak{l}}_{(\alpha,\centerdot)}}(\Sigma_{ \mathfrak{D}})\nonumber\\
&=& \epsilon^{ \mathfrak{S}}_{{ \mathfrak{Y}}_{{ \mathfrak{E}}_{ \mathfrak{S}}} \sqcap_{{}_{{ \mathfrak{E}}_{ \mathfrak{S}}}} { \mathfrak{l}}_{(\alpha,\alpha')}}(\Sigma_{ \mathfrak{D}})\nonumber\\
&=& \textit{\bf Y}\wedge \epsilon^{ \mathfrak{S}}_{{ \mathfrak{l}}_{(\alpha,\alpha')}}(\Sigma_{ \mathfrak{D}})
\end{eqnarray}
and 
\begin{eqnarray}
\textit{\bf N}\wedge { \mathfrak{d}}^{C}({ \mathfrak{l}}_{(\alpha,\alpha')}) &=&{ \mathfrak{d}}^{C}(\overline{{ \mathfrak{Y}}_{{ \mathfrak{E}}_{ \mathfrak{S}}}} \sqcap_{{}_{{ \mathfrak{E}}_{ \mathfrak{S}}}}
{ \mathfrak{l}}_{(\alpha,\alpha')})\nonumber\\
&=& { \mathfrak{d}}^{C}({ \mathfrak{l}}_{(\centerdot,\alpha')})\nonumber\\
&=& { \mathfrak{d}}^{C^{(1)}_{\omega'}}({ \mathfrak{l}}_{(\centerdot,\alpha')})\nonumber\\
&=& \epsilon^{ \mathfrak{S}}_{{ \mathfrak{l}}_{(\centerdot,\alpha')}}(\Sigma_{ \mathfrak{D}})\nonumber\\
&=& \epsilon^{ \mathfrak{S}}_{\overline{{ \mathfrak{Y}}_{{ \mathfrak{E}}_{ \mathfrak{S}}}} \sqcap_{{}_{{ \mathfrak{E}}_{ \mathfrak{S}}}} { \mathfrak{l}}_{(\alpha,\alpha')}}(\Sigma_{ \mathfrak{D}})\nonumber\\
&=& \textit{\bf N}\wedge \epsilon^{ \mathfrak{S}}_{{ \mathfrak{l}}_{(\alpha,\alpha')}}(\Sigma_{ \mathfrak{D}}).
\end{eqnarray}
From these two relations, we deduce immediately
\begin{eqnarray}
{ \mathfrak{d}}^{C}({ \mathfrak{l}}_{(\alpha,\alpha')}) &=& \epsilon^{ \mathfrak{S}}_{{ \mathfrak{l}}_{(\alpha,\alpha')}}(\Sigma_{ \mathfrak{D}}).
\end{eqnarray}
\end{proof}

\begin{theorem} \label{DSigmaepsilon}
Let us consider an operational description ${ \mathfrak{D}}$. We have then necessarily the following property :
\begin{eqnarray}
\exists \Sigma_{ \mathfrak{D}}:=  (\bigsqcup{}^{{}^{{ \mathfrak{S}}}}_{\omega\in \overline{ \mathfrak{S}}^{{}^{pure}}}\; \Sigma^{C^{(1)}_{\omega}}_{ \mathfrak{D}}) &\vert & \forall C\in {\mathcal{C}},\forall { \mathfrak{l}}\in C,\;\; { \mathfrak{d}}^{C}({ \mathfrak{l}})=\epsilon^{ \mathfrak{S}}_{ \mathfrak{l}}(\Sigma_{ \mathfrak{D}}).\;\;\;\;\;\;\;\;\;\;
\end{eqnarray}
\end{theorem}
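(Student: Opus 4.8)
The plan is to assemble the statement directly from the structural results already established, so the argument will be short. First I would invoke Theorem~\ref{theoremoperationaldescriptionadmissible} to know that the operational description $\mathfrak{D}$ is admissible, i.e.\ that $\{\, \Sigma^{C^{(1)}_{\omega}}_{ \mathfrak{D}}\;\vert\; \omega\in \overline{ \mathfrak{S}}^{{}^{pure}}\,\}\in { \mathcal{Q}}_c(\overline{ \mathfrak{S}})$. This is exactly the hypothesis needed by Lemma~\ref{lemmasolutioncontext0}, so I would apply that lemma to obtain that the supremum $\Sigma_{ \mathfrak{D}}:=\bigsqcup{}^{{}^{{ \mathfrak{S}}}}_{\omega\in \overline{ \mathfrak{S}}^{{}^{pure}}} \Sigma^{C^{(1)}_{\omega}}_{ \mathfrak{D}}$ is a well-defined element of ${ \mathfrak{S}}$ and satisfies the key identity $\Sigma^{C^{(1)}_{\omega}}_{ \mathfrak{D}}=\omega \sqcap_{{}_{{ \mathfrak{S}}}} \Sigma_{ \mathfrak{D}}$ for every $\omega\in \overline{ \mathfrak{S}}^{{}^{pure}}$. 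This handles the existence part of the assertion.

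Next I would verify the equality $ { \mathfrak{d}}^{C}({ \mathfrak{l}})=\epsilon^{ \mathfrak{S}}_{ \mathfrak{l}}(\Sigma_{ \mathfrak{D}})$ by splitting according to the type of the maximal compatibility context $C\in{ \mathcal{C}}$, using the dichotomy recorded before Lemma~\ref{lemmacontext}. If $C$ is of type~1, then by Lemma~\ref{lemmacontexttype1} it equals $C^{(1)}_{\omega}$ for a unique $\omega\in\overline{ \mathfrak{S}}^{{}^{pure}}$, and Lemma~\ref{lemmasolutioncontext1} gives $ { \mathfrak{d}}^{C^{(1)}_{\omega}}({ \mathfrak{l}})=\epsilon^{ \mathfrak{S}}_{ \mathfrak{l}}(\Sigma_{ \mathfrak{D}})$ for all ${ \mathfrak{l}}\in C^{(1)}_{\omega}$. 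If $C$ is of type~2, then Lemma~\ref{lemmasolutioncontext2} gives $ { \mathfrak{d}}^{C}({ \mathfrak{l}})=\epsilon^{ \mathfrak{S}}_{ \mathfrak{l}}(\Sigma_{ \mathfrak{D}})$ for all ${ \mathfrak{l}}\in C$. Since these two cases exhaust ${ \mathcal{C}}$, this yields $ { \mathfrak{d}}^{C}({ \mathfrak{l}})=\epsilon^{ \mathfrak{S}}_{ \mathfrak{l}}(\Sigma_{ \mathfrak{D}})$ for every $C\in{ \mathcal{C}}$ and every ${ \mathfrak{l}}\in C$, which is precisely the claim.

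The only genuinely delicate point to re-examine is the coherence between the type-1 and type-2 analyses, but this is already absorbed into Lemma~\ref{lemmasolutioncontext2}: a type-2 context $C$ is reduced, via Lemma~\ref{lemmacontexttype2a}, to the type-1 contexts $C^{(1)}_{\omega}$ and $C^{(1)}_{\omega'}$ attached to it, and the equality for the effects ${ \mathfrak{l}}_{(\alpha,\centerdot)}$ and ${ \mathfrak{l}}_{(\centerdot,\alpha')}$ is transported across the sheaf condition (\ref{sheafdistributionrecollement}), using (\ref{contextwedge}) to write ${ \mathfrak{l}}_{(\alpha,\alpha')}\sqcap_{{}_{{ \mathfrak{E}}_{ \mathfrak{S}}}}{ \mathfrak{Y}}_{{ \mathfrak{E}}_{ \mathfrak{S}}}={ \mathfrak{l}}_{(\alpha,\centerdot)}$ and $\overline{{ \mathfrak{l}}_{(\alpha,\alpha')}\sqcap_{{}_{{ \mathfrak{E}}_{ \mathfrak{S}}}}\overline{{ \mathfrak{Y}}_{{ \mathfrak{E}}_{ \mathfrak{S}}}}}={ \mathfrak{l}}_{(\centerdot,\alpha')}$, and then recovering ${ \mathfrak{d}}^{C}({ \mathfrak{l}}_{(\alpha,\alpha')})$ and $\epsilon^{ \mathfrak{S}}_{{ \mathfrak{l}}_{(\alpha,\alpha')}}(\Sigma_{ \mathfrak{D}})$ from their meets with $\textit{\bf Y}$ and $\textit{\bf N}$. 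Since all of those lemmas are already available, the present theorem needs no new work beyond collecting them; I expect the main obstacle, such as it is, to be purely bookkeeping about which $\omega,\omega'\in\overline{ \mathfrak{S}}^{{}^{pure}}$ are attached to a given type-2 context and checking that $\Sigma_{ \mathfrak{D}}$ as defined does satisfy the consistency constraints (\ref{consitencyontic}), which Lemma~\ref{lemmasolutioncontext0} already records as immediate.
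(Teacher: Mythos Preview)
Your proposal is correct and follows essentially the same route as the paper, which simply cites Lemmas~\ref{lemmasolutioncontext0}, \ref{lemmasolutioncontext1}, and \ref{lemmasolutioncontext2} as a direct consequence; you merely make explicit the invocation of Theorem~\ref{theoremoperationaldescriptionadmissible} (which is already folded into the hypothesis of Lemma~\ref{lemmasolutioncontext0}) and unpack the type-1/type-2 case split. Your third paragraph is extra commentary rehashing the internals of Lemma~\ref{lemmasolutioncontext2} and is not needed here, and note a small slip there: $\overline{{ \mathfrak{Y}}_{{ \mathfrak{E}}_{ \mathfrak{S}}}}\sqcap_{{}_{{ \mathfrak{E}}_{ \mathfrak{S}}}}{ \mathfrak{l}}_{(\alpha,\alpha')}={ \mathfrak{l}}_{(\centerdot,\alpha')}$ directly, without the outer conjugation bar you wrote.
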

\begin{proof}
Direct consequence of Lemma \ref{lemmasolutioncontext0}, Lemma \ref{lemmasolutioncontext1},Lemma \ref{lemmasolutioncontext2}.
\end{proof}

\begin{theorem}\label{DSigmaisomorphism}
The map from ${ \mathcal{D}}$ to ${ \mathfrak{S}}$, sending ${ \mathfrak{D}}$ to $\Sigma_{ \mathfrak{D}}:=  (\bigsqcup{}^{{}^{{ \mathfrak{S}}}}_{\omega\in \overline{ \mathfrak{S}}^{{}^{pure}}}\; \Sigma^{C^{(1)}_{\omega}}_{ \mathfrak{D}})$ is an isomorphism. 
%The empirical model ${ \mathcal{D}}_{{}_{Adm}}$ is non-contextual if and only if ${ \mathfrak{S}}= \overline{ \mathfrak{S}}$, i.e. iff there are no hidden states.
\end{theorem}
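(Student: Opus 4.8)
The plan is to exhibit the map $\Phi:{ \mathcal{D}}\to { \mathfrak{S}}$, ${ \mathfrak{D}}\mapsto \Sigma_{ \mathfrak{D}}:=\bigsqcup{}^{{}^{{ \mathfrak{S}}}}_{\omega\in \overline{ \mathfrak{S}}^{{}^{pure}}}\Sigma^{C^{(1)}_\omega}_{ \mathfrak{D}}$ as a bijective Inf semi-lattice homomorphism, and then to note that a bijective Inf semi-lattice homomorphism automatically has a homomorphic inverse. First I would record that $\Phi$ is well-defined: by Theorem \ref{theoremoperationaldescriptionadmissible} every operational description is admissible, so by Lemma \ref{lemmasolutioncontext0} the supremum defining $\Sigma_{ \mathfrak{D}}$ exists in ${ \mathfrak{S}}$. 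The whole argument then rests on two facts already available: Theorem \ref{DSigmaepsilon}, which says ${ \mathfrak{d}}^{C}({ \mathfrak{l}})=\epsilon^{ \mathfrak{S}}_{ \mathfrak{l}}(\Sigma_{ \mathfrak{D}})$ for all $C\in { \mathcal{C}}$ and ${ \mathfrak{l}}\in C$, and the separation property (\ref{realChuseparated}). I would also observe the bookkeeping point that, since ${ \mathcal{C}}$ covers $\overline{ \mathfrak{E}}_{ \mathfrak{S}}$ and since a \emph{coherent} operational description satisfies the sheaf condition (\ref{sheafdistributionrecollement}), the data $({ \mathfrak{d}}^{C})_{C\in { \mathcal{C}}}$ is equivalent to a single well-defined map $\overline{ \mathfrak{E}}_{ \mathfrak{S}}\to { \mathfrak{B}}$, ${ \mathfrak{l}}\mapsto { \mathfrak{d}}^{C}({ \mathfrak{l}})$ (for any $C$ with ${ \mathfrak{l}}\in C$); hence two elements of ${ \mathcal{D}}$ coincide iff these induced maps coincide on $\overline{ \mathfrak{E}}_{ \mathfrak{S}}$.

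For injectivity and the morphism property, suppose $\Sigma_{{ \mathfrak{D}}_1}=\Sigma_{{ \mathfrak{D}}_2}$. Then for every $C$ and every ${ \mathfrak{l}}\in C$, Theorem \ref{DSigmaepsilon} gives ${ \mathfrak{d}}_1^{C}({ \mathfrak{l}})=\epsilon^{ \mathfrak{S}}_{ \mathfrak{l}}(\Sigma_{{ \mathfrak{D}}_1})=\epsilon^{ \mathfrak{S}}_{ \mathfrak{l}}(\Sigma_{{ \mathfrak{D}}_2})={ \mathfrak{d}}_2^{C}({ \mathfrak{l}})$, so ${ \mathfrak{D}}_1={ \mathfrak{D}}_2$ by the previous remark. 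For the homomorphism property, fix $C$ and ${ \mathfrak{l}}\in C$ and compute, using the definition of $\sqcap_{{}_{{ \mathcal{D}}}}$, Theorem \ref{DSigmaepsilon} twice, and the homomorphic property (\ref{axiomsigmainfsemilattice}) of $\epsilon^{ \mathfrak{S}}$:
\begin{eqnarray}
({ \mathfrak{d}}_1^{C}\sqcap_{{}_{{ \mathcal{D}}}}{ \mathfrak{d}}_2^{C})({ \mathfrak{l}})&=&{ \mathfrak{d}}_1^{C}({ \mathfrak{l}})\wedge { \mathfrak{d}}_2^{C}({ \mathfrak{l}})=\epsilon^{ \mathfrak{S}}_{ \mathfrak{l}}(\Sigma_{{ \mathfrak{D}}_1})\wedge \epsilon^{ \mathfrak{S}}_{ \mathfrak{l}}(\Sigma_{{ \mathfrak{D}}_2})=\epsilon^{ \mathfrak{S}}_{ \mathfrak{l}}(\Sigma_{{ \mathfrak{D}}_1}\sqcap_{{}_{{ \mathfrak{S}}}}\Sigma_{{ \mathfrak{D}}_2}).
\end{eqnarray}
On the other hand, applying Theorem \ref{DSigmaepsilon} to ${ \mathfrak{D}}_1\sqcap_{{}_{{ \mathcal{D}}}}{ \mathfrak{D}}_2$ gives $({ \mathfrak{d}}_1^{C}\sqcap_{{}_{{ \mathcal{D}}}}{ \mathfrak{d}}_2^{C})({ \mathfrak{l}})=\epsilon^{ \mathfrak{S}}_{ \mathfrak{l}}(\Sigma_{{ \mathfrak{D}}_1\sqcap_{{}_{{ \mathcal{D}}}}{ \mathfrak{D}}_2})$, so $\epsilon^{ \mathfrak{S}}_{ \mathfrak{l}}(\Sigma_{{ \mathfrak{D}}_1\sqcap_{{}_{{ \mathcal{D}}}}{ \mathfrak{D}}_2})=\epsilon^{ \mathfrak{S}}_{ \mathfrak{l}}(\Sigma_{{ \mathfrak{D}}_1}\sqcap_{{}_{{ \mathfrak{S}}}}\Sigma_{{ \mathfrak{D}}_2})$ for all ${ \mathfrak{l}}\in \overline{ \mathfrak{E}}_{ \mathfrak{S}}$; by (\ref{realChuseparated}) this yields $\Phi({ \mathfrak{D}}_1\sqcap_{{}_{{ \mathcal{D}}}}{ \mathfrak{D}}_2)=\Phi({ \mathfrak{D}}_1)\sqcap_{{}_{{ \mathfrak{S}}}}\Phi({ \mathfrak{D}}_2)$.

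For surjectivity, fix $\sigma\in { \mathfrak{S}}$ and set ${ \mathfrak{d}}^{C}_{(\sigma)}({ \mathfrak{l}}):=\epsilon^{ \mathfrak{S}}_{ \mathfrak{l}}(\sigma)$ for all $C\in { \mathcal{C}}$, ${ \mathfrak{l}}\in C$. Using Theorem \ref{synthesiscontexts} (each context contains ${ \mathfrak{Y}}_{{ \mathfrak{E}}_{ \mathfrak{S}}}$, is stable under $\overline{\;\cdot\;}$ and under $\sqcap_{{}_{{ \mathfrak{E}}_{ \mathfrak{S}}}}$) together with the homomorphic properties (\ref{ety}), (\ref{etbar}), (\ref{axiomEinfsemilattice}) of $\epsilon^{ \mathfrak{S}}$, the family $({ \mathfrak{d}}^{C}_{(\sigma)})_{C\in { \mathcal{C}}}$ satisfies (\ref{sheafdistributioneta}), (\ref{sheafdistributionbar}) and (\ref{sheafdistributionwedge}), and the sheaf condition (\ref{sheafdistributionrecollement}) is trivial since the value does not depend on $C$; hence ${ \mathfrak{D}}_\sigma:=({ \mathfrak{d}}^{C}_{(\sigma)})_{C\in { \mathcal{C}}}$ is a coherent operational description, i.e. an element of ${ \mathcal{D}}$ (it is the one "defined globally" by $\sigma$). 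Theorem \ref{DSigmaepsilon} applied to ${ \mathfrak{D}}_\sigma$ then gives $\epsilon^{ \mathfrak{S}}_{ \mathfrak{l}}(\Sigma_{{ \mathfrak{D}}_\sigma})={ \mathfrak{d}}^{C}_{(\sigma)}({ \mathfrak{l}})=\epsilon^{ \mathfrak{S}}_{ \mathfrak{l}}(\sigma)$ for every ${ \mathfrak{l}}\in C$, and ranging over all $C\in { \mathcal{C}}$ this holds for every ${ \mathfrak{l}}\in \overline{ \mathfrak{E}}_{ \mathfrak{S}}$; by (\ref{realChuseparated}) we conclude $\Sigma_{{ \mathfrak{D}}_\sigma}=\sigma$, so $\Phi$ is surjective. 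Since $\Phi$ is a bijective Inf semi-lattice homomorphism, $\Phi^{-1}$ preserves infima as well, and $\Phi$ is the desired isomorphism. The main (and in fact only) subtle point is the reduction just used repeatedly: passing from the "per-context" data $({ \mathfrak{d}}^{C})$ to a single map on $\overline{ \mathfrak{E}}_{ \mathfrak{S}}$ via the sheaf condition and the covering property of ${ \mathcal{C}}$ — all the genuine work having already been done in Theorem \ref{DSigmaepsilon}.
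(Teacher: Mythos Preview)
Your proof is correct and follows essentially the same approach as the paper's: well-definedness via admissibility, injectivity and the homomorphism property both via Theorem \ref{DSigmaepsilon} combined with the separation property (\ref{realChuseparated}), and surjectivity by constructing ${ \mathfrak{D}}_\sigma$ from $\epsilon^{ \mathfrak{S}}_{\cdot}(\sigma)$. You are in fact slightly more careful than the paper in explicitly verifying that ${ \mathfrak{D}}_\sigma$ is a coherent operational description and in completing the homomorphism argument via (\ref{realChuseparated}); one small terminological slip is that the paper reserves ``defined globally'' for $\sigma\in\overline{\mathfrak{S}}$, whereas here $\sigma$ may be a hidden state, but this does not affect the argument.
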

\begin{proof}
First of all, we note that we can effectively associate to any ${ \mathfrak{D}}\in { \mathcal{D}}$ an element $\Sigma_{ \mathfrak{D}}$ satisfying
\begin{eqnarray}
\forall C\in {\mathcal{C}},\forall { \mathfrak{l}}\in C, && { \mathfrak{d}}^{C}({ \mathfrak{l}})=\epsilon^{ \mathfrak{S}}_{ \mathfrak{l}}(\Sigma_{ \mathfrak{D}}).
\end{eqnarray}
This point is guarantied by Lemma \ref{lemmasolutioncontext0}, Lemma \ref{lemmasolutioncontext1} and Lemma \ref{lemmasolutioncontext2}.  More explicitly, we have $\Sigma_{ \mathfrak{D}}:=  (\bigsqcup{}^{{}^{{ \mathfrak{S}}}}_{\omega\in \overline{ \mathfrak{S}}^{{}^{pure}}}\; \Sigma^{C^{(1)}_{\omega}}_{ \mathfrak{D}})\in { \mathfrak{S}}$ where the state $\Sigma^{C^{(1)}_{\omega}}_{ \mathfrak{D}}$ is defined in reference to (\ref{onticstate1}) (\ref{onticstate2}) and Lemma \ref{lemmacontexttype1}.\\
Note that the supremum defining $\Sigma_{ \mathfrak{D}}$ exists in ${ \mathfrak{S}}$ because ${ \mathfrak{D}}$ is admissible.\\
Secondly, this map is clearly injective.  Indeed, let us consider a pair ${ \mathfrak{D}},{ \mathfrak{D}}'$ of coherent  operational descriptions such that $\Sigma_{ \mathfrak{D}}=\Sigma_{ \mathfrak{D}'}$.  We have obviously $\forall C\in {\mathcal{C}},\forall { \mathfrak{l}}\in C,\;\; { \mathfrak{d}}^{C}({ \mathfrak{l}})=\epsilon^{ \mathfrak{S}}_{ \mathfrak{l}}(\Sigma_{ \mathfrak{D}})=\epsilon^{ \mathfrak{S}}_{ \mathfrak{l}}(\Sigma_{ \mathfrak{D}'})={ \mathfrak{d}'}^{C}({ \mathfrak{l}})$.\\
Endly, this map is also clearly a surjection. Indeed, we can associate, to any $\alpha\in { \mathfrak{S}}$,  a coherent  operational description ${ \mathfrak{D}}_\alpha:=({ \mathfrak{d}}_{(\alpha)}^{C})_{C\in {\mathcal{C}}}$ by $\forall C\in {\mathcal{C}},\forall { \mathfrak{l}}\in C,{ \mathfrak{d}}_{(\alpha)}^{C}({ \mathfrak{l}}):=\epsilon^{ \mathfrak{S}}_{ \mathfrak{l}}(\alpha)$.  We have then, for any $\omega\in \overline{ \mathfrak{S}}^{{}^{pure}}$, 
\begin{eqnarray}
\Sigma^{C^{(1)}_{\omega}}_{ \mathfrak{D}_\alpha} &=& \alpha \sqcap_{{}_{{ \mathfrak{S}}}} \omega.
\end{eqnarray}
We then build as usual $\Sigma_{ \mathfrak{D}_\alpha}:=  (\bigsqcup{}^{{}^{{ \mathfrak{S}}}}_{\omega\in \overline{ \mathfrak{S}}^{{}^{pure}}}\; \Sigma^{C^{(1)}_{\omega}}_{ \mathfrak{D}_\alpha})$. We have by construction $\forall C\in {\mathcal{C}},\forall { \mathfrak{l}}\in C,\epsilon^{ \mathfrak{S}}_{ \mathfrak{l}}(\Sigma_{ {\mathfrak{D}}_{(\alpha)}})=\epsilon^{ \mathfrak{S}}_{ \mathfrak{l}}(\alpha)$. Using (\ref{realChuseparated}), we then conclude that $\Sigma_{ {\mathfrak{D}}_{(\alpha)}} =\alpha$.\\
Endly, we derive easily the homomorphic property from the properties of the evaluation map $\epsilon^{ \mathfrak{S}}$ :
\begin{eqnarray}
\forall C\in {\mathcal{C}},\forall { \mathfrak{l}}\in C,&& ({ \mathfrak{d}}^{C}\sqcap_{{}_{\mathcal{D}}} { \mathfrak{d}'}^{C})({ \mathfrak{l}})=\epsilon^{ \mathfrak{S}}_{ \mathfrak{l}}(\Sigma_{ \mathfrak{D}})\wedge \epsilon^{ \mathfrak{S}}_{ \mathfrak{l}}(\Sigma_{ \mathfrak{D}'})
=\epsilon^{ \mathfrak{S}}_{ \mathfrak{l}}(\Sigma_{ \mathfrak{D}}\sqcap_{{}_{{ \mathfrak{S}}}} \Sigma_{ \mathfrak{D}'}).\;\;\;\;\;\;\;\;\;\;\;\;\;\;\;\;\;\;\;\;
\end{eqnarray}
\end{proof}

\section{The multidimensional linear indeterministic space of states  (the multi-quantum-bit system)}\label{sectionmultidimensional}

The previous section established a general proposal for the tensor product of indeterministic spaces of states. In the next section, we will see that this construction is sufficient to achieve our main objective (i.e. to exhibit the main typically quantum properties in our quantum logic framework). The present section is devoted to a detailed analysis of the space of states obtained as the iterated tensor product of the elementary indeterministic space of states associated with the quantum bit. Not only can this space of states be simply described using the ontic completion tools developed above, but we will also demonstrate how close this space is to an irreducible Hilbert geometry.

\subsection{Preliminary remarks}\label{subsectionpreliminaryNdim}

To begin this subsection, we consider the two copies of one-dimensional indeterministic spaces of states ${{{ \overline{\mathfrak{S}}}}}_{A}={\mathfrak{Z}}'_{N_A}$ and ${{{ \overline{\mathfrak{S}}}}}_{B}={\mathfrak{Z}}'_{N_B}$ and we consider the space of states ${{{ \overline{S}}}}_{AB}:={{{ \overline{\mathfrak{S}}}}}_{A}\widetilde{\otimes}{{{ \overline{\mathfrak{S}}}}}_{B}$. We denote ${ \mathfrak{S}}_A={ \mathfrak{J}}^c_{{\overline{ \mathfrak{S}}_A}}$ and ${ \mathfrak{S}}_B={ \mathfrak{J}}^c_{{\overline{ \mathfrak{S}}_B}}$. We intent to describe the ontic completion ${{{ \hat{S}}}}_{AB}:={ \mathfrak{S}}_{A}\widehat{\otimes} { \mathfrak{S}}_{B}$ defined according to the subsection \ref{subsectiontensorcompleterealspace}. 
If necessary, in order to shorten our formulas, we will eventually replace the notation ${{{ \overline{S}}}}{}_{AB}$ by $\overline{\mathfrak{S}}$ and the notation ${{{ \hat{S}}}}{}_{AB}$ by ${\mathfrak{S}}$.

\begin{theorem}
Let us consider $\mu,\nu,\phi\in \overline{ \mathfrak{S}}{}^{{}^{pure}}$ and $\gamma\in \overline{ \mathfrak{S}}$ such that $\gamma=(\phi\sqcap_{{}_{\overline{\mathfrak{S}}}}\nu)\sqcoversubset_{{}_{\overline{ \mathfrak{S}}}} \phi,\nu$ and $\mu^\star\not\sqsubseteq_{{}_{\overline{ \mathfrak{S}}}}\gamma$, we have
\begin{eqnarray}
%\exists \chi\in { \mathfrak{S}}&\vert & \gamma\sqsubseteq_{{}_{\overline{ \mathfrak{S}}}}\chi\;\;\textit{\rm and}\;\;\mu\underline{\perp}\chi. 
\chi:=(\mu^\star\sqcup_{{}_{{ \mathfrak{S}}}}\gamma)\;\;\textit{\rm exists in}\;{ \mathfrak{S}}.
\label{thirdcoveringpropertySbari}
\end{eqnarray}
We have then either $\chi\in \overline{ \mathfrak{S}}{}^{{}^{pure}}$ or $\chi\in { \mathfrak{S}}\smallsetminus \overline{ \mathfrak{S}}$ and then, in the last case,  denoting as usual $\Theta^{\overline{ \mathfrak{S}}}(\chi):=Max((\downarrow_{{}_{{ \mathfrak{S}}}}\chi)\cap \overline{ \mathfrak{S}})$, we have
\begin{eqnarray}
&&\forall \alpha\in \Theta^{\overline{ \mathfrak{S}}}(\chi),\;\;  (\exists \kappa\in \overline{ \mathfrak{S}}{}^{{}^{pure}}\;\vert\; \alpha\sqcoversubset_{{}_{\overline{ \mathfrak{S}}}}\kappa)\;\textit{\rm and}\;
\alpha\sqcoversubset_{{}_{{ \mathfrak{S}}}} \chi 
\label{thirdcoveringpropertySbarii}\;\;\;\;\;\;\;\;\;\;\;\;\;\;\;\;\;\;\\
&& \forall \alpha,\beta\in \Theta^{\overline{ \mathfrak{S}}}(\chi),\;\alpha\not=\beta,\;\;\;\;(\alpha\sqcap_{{}_{\overline{ \mathfrak{S}}}}\beta)\sqcoversubset_{{}_{\overline{ \mathfrak{S}}}} \alpha,\beta,\label{thirdcoveringpropertySbariii}\\
&&\forall \alpha,\beta,\delta\in \Theta^{\overline{ \mathfrak{S}}}(\chi)\;\textit{\rm distinct}, \;\;\;\;(\alpha\sqcap_{{}_{\overline{ \mathfrak{S}}}}\beta)\not=(\alpha\sqcap_{{}_{\overline{ \mathfrak{S}}}}\delta),\;\;\;\;\;\;\;\;\;\;\;\;\;\;\;\;\;\;\label{thirdcoveringpropertySbariv}
%&&\hspace{0.5cm} \forall \sigma_1,\sigma_2\in \Theta^{\overline{ \mathfrak{S}}}(\chi),\sigma_1\not=\sigma_2,\;\;\;\; \sigma_1\sqcup_{{}_{{ \mathfrak{S}}}}\sigma_2=\chi.\;\;\;\;\;\;\;\;\;\;\;\;\;\;\;\;\;\;\label{thirdcoveringpropertySbarv}\\
\end{eqnarray}

Let us once again consider $\mu\in \overline{ \mathfrak{S}}{}^{{}^{pure}}$ and $\gamma\in \overline{ \mathfrak{S}}$ such that $\mu^\star\not\sqsubseteq_{{}_{\overline{ \mathfrak{S}}}}\gamma$.
Let us assume that there exists $\nu\in \overline{ \mathfrak{S}}{}^{{}^{pure}}$ such that $\gamma=(\mu\sqcap_{{}_{\overline{\mathfrak{S}}}}\nu)\sqcoversubset_{{}_{\overline{ \mathfrak{S}}}} \mu,\nu$ then let us here again denote $\chi=(\mu^\star\sqcup_{{}_{{ \mathfrak{S}}}}\gamma)$.\\ 
If we have $\chi\in \overline{ \mathfrak{S}}$, then necessarily $\chi\in \overline{ \mathfrak{S}}{}^{{}^{pure}}$, and if we have $\chi\in { \mathfrak{S}}\smallsetminus \overline{ \mathfrak{S}}$ then
\begin{eqnarray}
&&\exists (\varphi_{\epsilon})_{\epsilon},(\psi_{\epsilon})_{\epsilon} \in (\overline{ \mathfrak{S}}{}^{{}^{pure}})^{\Theta^{\overline{ \mathfrak{S}}}(\chi)}\;\vert\; \nonumber\\
&&\hspace{0.3cm}\varphi_{\gamma}:=\mu,\psi_{\gamma}:=\nu, \;\;\forall \alpha\in \Theta^{\overline{ \mathfrak{S}}}(\chi)\smallsetminus \{\gamma\},\;\;\; \varphi_\gamma^\star\sqsubseteq_{{}_{\overline{ \mathfrak{S}}}} \varphi_\alpha,\;\;\varphi_\gamma^\star \sqsubseteq_{{}_{\overline{ \mathfrak{S}}}} \psi_\alpha,\label{thirdcoveringpropertySbar0}\\
&&\hspace{0.3cm}\forall \epsilon\in \Theta^{\overline{ \mathfrak{S}}}(\chi)\smallsetminus\{\gamma\},\;\;\;
\psi_{\epsilon},\varphi_{\epsilon}\sqcoversupset_{{}_{\overline{ \mathfrak{S}}}}\epsilon,\;\;
\psi_{\epsilon}\not=\varphi_{\epsilon}\label{thirdcoveringpropertySbar0bis}\\
&&\hspace{0.3cm}\forall \epsilon\in \Theta^{\overline{ \mathfrak{S}}}(\chi)\smallsetminus \{\gamma\},\;\;\;\varphi_{\epsilon}^\star \sqsubseteq_{{}_{{ \mathfrak{S}}}}\psi_{\epsilon}, %\;\textit{\rm and}
\;\;\;\;\;\;\;\;\;\;\;\;\label{thirdcoveringpropertySbarviic}\\
&&\hspace{0.3cm}\forall \alpha,\beta\in \Theta^{\overline{ \mathfrak{S}}}(\chi)\smallsetminus \{\gamma\},\alpha\not=\beta,\;\;\; \varphi_{\alpha}^\star \sqsubseteq_{{}_{\overline{ \mathfrak{S}}}}\varphi_{\beta},\;\;\varphi_{\alpha}^\star \not\sqsubseteq_{{}_{\overline{ \mathfrak{S}}}}\psi_{\beta},\label{thirdcoveringpropertySbarviib}\;\;\;\;\;\;\;\;\;\;\;\;\;\;\;\\
&&\hspace{0.3cm}\forall \alpha,\beta\in \Theta^{\overline{ \mathfrak{S}}}(\chi)\smallsetminus \{\gamma\},\alpha\not=\beta,\;\exists \lambda_{\alpha\beta}\in \overline{ \mathfrak{S}}{}^{{}^{pure}}\;\vert\; 
(\varphi_{\beta}\sqcap_{{}_{\overline{ \mathfrak{S}}}}\psi_{\alpha}), (\psi_{\beta}\sqcap_{{}_{\overline{ \mathfrak{S}}}}\varphi_{\alpha})\sqcoversubset_{{}_{\overline{ \mathfrak{S}}}}\lambda_{\alpha\beta},
\;\;\;\;\;\;\;\;\;\;\;\;\nonumber\\
&&\hspace{6.5cm}\textit{\rm and}\;\lambda_{\alpha\beta}^\star \sqsubseteq_{{}_{\overline{ \mathfrak{S}}}}\varphi_{\alpha},\;\;\lambda_{\alpha\beta}^\star \sqsubseteq_{{}_{\overline{ \mathfrak{S}}}}\varphi_{\beta},\label{thirdcoveringpropertySbarviia}\\
&&\hspace{0.3cm}\forall \alpha\in \Theta^{\overline{ \mathfrak{S}}}(\chi)\smallsetminus \{\gamma\},\exists \mu_{\alpha\gamma}\in \overline{ \mathfrak{S}}{}^{{}^{pure}}\;\vert\; 
(\varphi_{\gamma}\sqcap_{{}_{\overline{ \mathfrak{S}}}}\varphi_{\alpha}), (\psi_{\gamma}\sqcap_{{}_{\overline{ \mathfrak{S}}}}\psi_{\alpha})\sqcoversubset_{{}_{\overline{ \mathfrak{S}}}}\mu_{\alpha\gamma}
\;\;\;\;\;\;\;\;\;\;\;\;\nonumber\\
&&\hspace{6.5cm}\textit{\rm and}\;\mu_{\alpha\gamma}^\star \sqsubseteq_{{}_{\overline{ \mathfrak{S}}}}\psi_{\alpha},\label{thirdcoveringpropertySbarviiabis}
\end{eqnarray}
The collections $(\varphi_{\epsilon})_{\epsilon},(\psi_{\epsilon})_{\epsilon} \in (\overline{ \mathfrak{S}}{}^{{}^{pure}})^{\Theta^{\overline{ \mathfrak{S}}}(\chi)}$ satisfying the properties (\ref{thirdcoveringpropertySbar0})(\ref{thirdcoveringpropertySbarviic})(\ref{thirdcoveringpropertySbarviib}) are {unique} up to a permutation on the elements of $\Theta^{\overline{ \mathfrak{S}}}(\chi)\smallsetminus \{\gamma\}$. \\

Let us once again consider $\mu\in \overline{ \mathfrak{S}}{}^{{}^{pure}}$ and $\gamma\in \overline{ \mathfrak{S}}$ such that $\mu^\star\not\sqsubseteq_{{}_{\overline{ \mathfrak{S}}}}\gamma$.
If there exists $\nu\in \overline{ \mathfrak{S}}{}^{{}^{pure}}$ such that $\gamma=(\mu\sqcap_{{}_{\overline{\mathfrak{S}}}}\nu)\sqcoversubset_{{}_{\overline{ \mathfrak{S}}}} \mu,\nu$ then let us here again denote $\chi=(\mu^\star\sqcup_{{}_{{ \mathfrak{S}}}}\gamma)$. If we have $\chi\in { \mathfrak{S}}\smallsetminus \overline{ \mathfrak{S}}$ then
\begin{eqnarray}
&&\forall \beta\in \Theta^{\overline{ \mathfrak{S}}}(\chi)\smallsetminus \{\gamma\},\exists \omega,\kappa\in \overline{ \mathfrak{S}}{}^{{}^{pure}}\;\vert\; \nonumber\\
&&\hspace{3cm}(\omega\sqcap_{{}_{\overline{ \mathfrak{S}}}}\kappa)\sqcoversubset_{{}_{\overline{ \mathfrak{S}}}}\omega,\kappa,\;\textit{\rm and}\;
\omega^\star\sqcup_{{}_{{ \mathfrak{S}}}}(\omega\sqcap_{{}_{\overline{ \mathfrak{S}}}}\kappa)=\beta\sqcup_{{}_{{ \mathfrak{S}}}}\gamma^\star%=(\beta\sqcap_{{}_{{ \mathfrak{S}}}}\gamma)\sqcup_{{}_{{ \mathfrak{S}}}}\gamma^\star
\;\;\;\;\;\;\;\;\;\;\;\;
\label{thirdcoveringpropertySbarvi}
\end{eqnarray}
\end{theorem}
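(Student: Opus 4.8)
The plan is to exploit the concrete description of $\overline{\mathfrak{S}} = \overline{S}_{AB} = \overline{\mathfrak{S}}_A \widetilde{\otimes} \overline{\mathfrak{S}}_B$ and the expansion formula (\ref{developmentetildeordersimplify}), together with Theorem \ref{theorempuretilde} and the covering properties established in Theorem \ref{firstcoveringlemma} and Theorem \ref{secondcoveringlemma}. Since $\overline{\mathfrak{S}}_A = \mathfrak{Z}'_{N_A}$ and $\overline{\mathfrak{S}}_B = \mathfrak{Z}'_{N_B}$ are one-dimensional indeterministic spaces (flat lattices with atoms $\{\sigma_i, \sigma_i^\star\}$), the structure of $\overline{\mathfrak{S}}_A$ and $\overline{\mathfrak{S}}_B$ is extremely rigid: every non-trivial pure state is an atom, every proper meet of two distinct pure states is $\bot$, and the covering relation coincides with $\bot \sqcoversubset \alpha$ for atoms. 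So most of the combinatorial case-analysis reduces to checking relations between atoms and their stars in the two factors.

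First I would unwind what the hypotheses of (\ref{thirdcoveringpropertySbarvi}) give us. We have $\mu \in \overline{\mathfrak{S}}^{pure}$, $\gamma \in \overline{\mathfrak{S}}$ with $\mu^\star \not\sqsubseteq_{\overline{\mathfrak{S}}} \gamma$, and we assume $\gamma = \mu \sqcap_{\overline{\mathfrak{S}}} \nu \sqcoversubset_{\overline{\mathfrak{S}}} \mu, \nu$ for some $\nu \in \overline{\mathfrak{S}}^{pure}$, with $\chi = \mu^\star \sqcup_{\mathfrak{S}} \gamma \in \mathfrak{S} \smallsetminus \overline{\mathfrak{S}}$. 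By the earlier parts of the theorem (which I may assume), the collections $(\varphi_\epsilon)_\epsilon, (\psi_\epsilon)_\epsilon$ indexed by $\Theta^{\overline{\mathfrak{S}}}(\chi)$ exist, with $\varphi_\gamma = \mu$, $\psi_\gamma = \nu$, and for each $\epsilon \in \Theta^{\overline{\mathfrak{S}}}(\chi) \smallsetminus \{\gamma\}$ we have $\varphi_\epsilon, \psi_\epsilon \sqcoversupset_{\overline{\mathfrak{S}}} \epsilon$, $\varphi_\epsilon \neq \psi_\epsilon$, and $\varphi_\epsilon^\star \sqsubseteq_{\mathfrak{S}} \psi_\epsilon$. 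The natural candidate for the $\omega, \kappa$ claimed in (\ref{thirdcoveringpropertySbarvi}) is precisely $\omega := \varphi_\beta$ and $\kappa := \psi_\beta$: then $\omega \sqcap_{\overline{\mathfrak{S}}} \kappa = \varphi_\beta \sqcap_{\overline{\mathfrak{S}}} \psi_\beta = \beta$ by (\ref{thirdcoveringpropertySbar0bis}) (since both cover $\beta$ and they are distinct, their meet is $\beta$ in this flat-factor setting), so $(\omega \sqcap_{\overline{\mathfrak{S}}} \kappa) \sqcoversubset_{\overline{\mathfrak{S}}} \omega, \kappa$ holds automatically, and the only remaining identity to verify is $\omega^\star \sqcup_{\mathfrak{S}} (\omega \sqcap_{\overline{\mathfrak{S}}} \kappa) = \beta \sqcup_{\mathfrak{S}} \gamma^\star$, i.e. $\varphi_\beta^\star \sqcup_{\mathfrak{S}} \beta = \beta \sqcup_{\mathfrak{S}} \gamma^\star$.

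To prove $\varphi_\beta^\star \sqcup_{\mathfrak{S}} \beta = \beta \sqcup_{\mathfrak{S}} \gamma^\star$ I would argue in the completion $\mathfrak{S} = \mathfrak{J}^c_{\overline{\mathfrak{S}}}$, using the isomorphism $\Theta^{\overline{\mathfrak{S}}}$ of Theorem (the one stating $\Theta^{\overline{\mathfrak{S}}}$ is an isomorphism from ${\mathfrak{S}}$ to ${\mathfrak{J}}_{\overline{\mathfrak{S}}}$) and Lemma \ref{rangleHsubsetH}-type formulas for suprema in $\mathfrak{J}_{\overline{\mathfrak{S}}}$. Concretely: $\beta \sqcup_{\mathfrak{S}} \gamma^\star$ is the supremum of $\{\beta, \gamma^\star\}$, and since $\gamma^\star = (\mu \sqcap \nu)^\star = \mu^\star \sqcup_{\mathfrak{S}} \nu^\star$ (order-reversing star on the covering pair), one shows both $\beta \sqcup \gamma^\star$ and $\varphi_\beta^\star \sqcup \beta$ equal the closure $cl_c^{\overline{\mathfrak{S}}}$ of a common generating set, by checking they admit the same real states below them. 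The properties (\ref{thirdcoveringpropertySbar0}), (\ref{thirdcoveringpropertySbarviic}), (\ref{thirdcoveringpropertySbarviib}) force $\varphi_\gamma^\star = \mu^\star \sqsubseteq_{\overline{\mathfrak{S}}} \varphi_\beta, \psi_\beta$, and combined with $\varphi_\beta^\star \sqsubseteq_{\mathfrak{S}} \psi_\beta$ and the defining property of $\chi$ as a hidden state whose $\Theta$-image contains $\gamma$ and $\beta$, one pins down that $\Theta^{\overline{\mathfrak{S}}}(\varphi_\beta^\star \sqcup_{\mathfrak{S}} \beta) = \Theta^{\overline{\mathfrak{S}}}(\beta \sqcup_{\mathfrak{S}} \gamma^\star)$, hence equality by injectivity of $\Theta^{\overline{\mathfrak{S}}}$.

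The main obstacle I anticipate is the bookkeeping in the last step: verifying that the two suprema have the same $\Theta$-image requires carefully tracking which atoms of $\overline{\mathfrak{S}}_A$ and $\overline{\mathfrak{S}}_B$ appear below $\varphi_\beta^\star \sqcup \beta$ versus below $\beta \sqcup \gamma^\star$, using the expansion (\ref{developmentetildeordersimplify}) to compute order relations among pure tensors and their meets. This is essentially a finite but intricate matching argument, and one must be careful that the supremum is taken in the \emph{ontic completion} $\mathfrak{S}$ (using the closure $cl_c^{\overline{\mathfrak{S}}}$ of formula (\ref{supremaJ})/(\ref{definitioncompleterealspaceofstates3})) rather than in $\overline{\mathfrak{S}}$ — which is exactly why the hypothesis $\chi \in \mathfrak{S} \smallsetminus \overline{\mathfrak{S}}$ is needed, since it guarantees these suprema genuinely land among the hidden states and the closure is non-trivial. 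Once the $\Theta$-images are matched, the conclusion follows formally from the isomorphism theorem.
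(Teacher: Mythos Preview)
Your candidate $\omega=\varphi_\beta$, $\kappa=\psi_\beta$ is wrong, and the identity $\varphi_\beta^\star\sqcup_{{}_{\mathfrak{S}}}\beta=\beta\sqcup_{{}_{\mathfrak{S}}}\gamma^\star$ that you set out to prove is in fact false. The reason is already contained in the properties you quote: by (\ref{thirdcoveringpropertySbarviic}) you have $\varphi_\beta^\star\sqsubseteq_{{}_{\mathfrak{S}}}\psi_\beta$, and since $\beta=\varphi_\beta\sqcap_{{}_{\overline{\mathfrak{S}}}}\psi_\beta\sqsubseteq_{{}_{\overline{\mathfrak{S}}}}\psi_\beta$ as well, the supremum $\varphi_\beta^\star\sqcup_{{}_{\mathfrak{S}}}\beta$ is bounded above by $\psi_\beta\in\overline{\mathfrak{S}}{}^{{}^{pure}}$; combined with $\beta\sqcoversubset_{{}_{\overline{\mathfrak{S}}}}\psi_\beta$ this forces $\varphi_\beta^\star\sqcup_{{}_{\mathfrak{S}}}\beta=\psi_\beta\in\overline{\mathfrak{S}}$. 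On the other hand $\beta\sqcup_{{}_{\mathfrak{S}}}\gamma^\star$ is a genuine hidden state with a three-element $\Theta^{\overline{\mathfrak{S}}}$-image, so the two cannot coincide.

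The paper does not argue abstractly from (\ref{thirdcoveringpropertySbar0})--(\ref{thirdcoveringpropertySbarviib}); it works in explicit tensor coordinates. Writing $\mu=\alpha\widetilde{\otimes}\beta$, $\nu=\alpha'\widetilde{\otimes}\beta'$, one has for instance $\delta=(\alpha\widetilde{\otimes}\beta^\star)\sqcap_{{}_{\overline{\mathfrak{S}}}}(\alpha^\star\widetilde{\otimes}\beta')\in\Theta^{\overline{\mathfrak{S}}}(\chi)\smallsetminus\{\gamma\}$. A direct computation of $cl_c^{\overline{\mathfrak{S}}}\{\gamma^\star,\delta\}$ gives a three-element set, and one then \emph{recognizes} it as $cl_c^{\overline{\mathfrak{S}}}\{\omega^\star,\omega\sqcap_{{}_{\overline{\mathfrak{S}}}}\kappa\}$ for $\omega=\alpha^\star\widetilde{\otimes}\beta'{}^\star$ and $\kappa=\alpha'{}^\star\widetilde{\otimes}\beta^\star$. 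These pure states are \emph{not} among the $\varphi_\epsilon,\psi_\epsilon,\lambda_{\alpha\beta},\mu_{\alpha\gamma}$ listed earlier in the theorem; they have to be found by computing $\delta\sqcup_{{}_{\mathfrak{S}}}\gamma^\star$ first and then reading off a suitable pair from its $\Theta$-image. Your abstract strategy of picking $\omega,\kappa$ from the already-constructed data cannot succeed here.
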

\begin{proof}
We will consider $\alpha,\alpha',\alpha''\in \overline{ \mathfrak{S}}{}_{A}^{{}^{pure}}$ and $\beta,\beta',\beta''\in \overline{ \mathfrak{S}}{}_{B}^{{}^{pure}}$ and we intent to compute 
$$cl_c^{{{{ \overline{S}}}}_{AB}}\left(
\left\{(\alpha''\widetilde{\otimes}\beta''\sqcap_{{}_{{{{ \overline{S}}}}_{AB}}}\!\!\!\alpha'\widetilde{\otimes}\beta')
\,,\,(\alpha\widetilde{\otimes}\beta)^\star\right\}\right)$$ with $\alpha'\not=\alpha''$ or $\beta'\not=\beta''$, and with $(\alpha\widetilde{\otimes}\beta)^\star\not\sqsubseteq_{{}_{{{{ \overline{\mathfrak{S}}}}}}}(\alpha''\widetilde{\otimes}\beta''\sqcap_{{}_{{{{ \overline{\mathfrak{S}}}}}}}\!\!\!\alpha'\widetilde{\otimes}\beta')$.\\

We will distinguish different cases.\\

(1) $\alpha'\not=\alpha''$  and $\beta''=\beta'$.  We have $(\alpha''\widetilde{\otimes}\beta''\sqcap_{{}_{{{{ \overline{\mathfrak{S}}}}}}}\!\!\!\alpha'\widetilde{\otimes}\beta')=\bot_{{}_{\overline{ \mathfrak{S}}_{A}}}\widetilde{\otimes}\beta'$ and then $\alpha^\star\widetilde{\otimes}\beta'$ is the supremum of $(\alpha''\widetilde{\otimes}\beta''\sqcap_{{}_{{{{ \overline{\mathfrak{S}}}}}}}\!\!\!\alpha'\widetilde{\otimes}\beta')$ and $(\alpha^\star \widetilde{\otimes}\bot_{{}_{\overline{ \mathfrak{S}}_{B}}})\sqcap_{{}_{{{{ \overline{\mathfrak{S}}}}}}} (\bot_{{}_{\overline{ \mathfrak{S}}_{A}}}\widetilde{\otimes} \beta^\star)$.  Hence, $$cl_c^{{{{ \overline{\mathfrak{S}}}}}}\left(
\left\{(\alpha''\widetilde{\otimes}\beta''\sqcap_{{}_{{{{ \overline{\mathfrak{S}}}}}}}\!\!\!\alpha'\widetilde{\otimes}\beta')
\,,\,(\alpha\widetilde{\otimes}\beta)^\star\right\}\right)=\{\alpha^\star\widetilde{\otimes}\beta'\}\;\in \; {\mathcal{K}}({{{ \overline{\mathfrak{S}}}}})$$

(2) $\alpha'=\alpha''$  and $\beta''\not=\beta'$.  We have $(\alpha''\widetilde{\otimes}\beta''\sqcap_{{}_{{{{ \overline{\mathfrak{S}}}}}}}\!\!\!\alpha'\widetilde{\otimes}\beta')=\alpha'\widetilde{\otimes}\bot_{{}_{\overline{ \mathfrak{S}}_{A}}}$ and then $\alpha'\widetilde{\otimes}\beta^\star$ is the supremum of $(\alpha''\widetilde{\otimes}\beta''\sqcap_{{}_{{{{ \overline{\mathfrak{S}}}}}}}\!\!\!\alpha'\widetilde{\otimes}\beta')$ and $(\alpha^\star \widetilde{\otimes}\bot_{{}_{\overline{ \mathfrak{S}}_{B}}})\sqcap_{{}_{{{{ \overline{\mathfrak{S}}}}}}} (\bot_{{}_{\overline{ \mathfrak{S}}_{A}}}\widetilde{\otimes} \beta^\star)$.  Hence, $$cl_c^{{{{ \overline{\mathfrak{S}}}}}}\left(
\left\{(\alpha''\widetilde{\otimes}\beta''\sqcap_{{}_{{{{ \overline{\mathfrak{S}}}}}}}\!\!\!\alpha'\widetilde{\otimes}\beta')
\,,\,(\alpha\widetilde{\otimes}\beta)^\star\right\}\right)=\{\alpha'\widetilde{\otimes}\beta^\star\}\;\in \; {\mathcal{K}}({{{ \overline{\mathfrak{S}}}}})$$

(3) $\alpha''\not=\alpha'$ or $\beta''\not=\beta'$.  We have to distinguish different sub-cases :\\
(3A) $\alpha'\sqsupseteq_{{}_{\overline{ \mathfrak{S}}_{A}}} \alpha^\star$ or $\beta'\sqsupseteq_{{}_{\overline{ \mathfrak{S}}_{A}}} \beta^\star$. Then, $\alpha'\widetilde{\otimes}\beta'$ is the supremum in ${{{ \overline{\mathfrak{S}}}}}$ of $(\alpha''\widetilde{\otimes}\beta''\sqcap_{{}_{{{{ \overline{\mathfrak{S}}}}}}}\!\!\!\alpha'\widetilde{\otimes}\beta')$ and $(\alpha^\star \widetilde{\otimes}\bot_{{}_{\overline{ \mathfrak{S}}_{B}}})\sqcap_{{}_{{{{ \overline{\mathfrak{S}}}}}}} (\bot_{{}_{\overline{ \mathfrak{S}}_{A}}}\widetilde{\otimes} \beta^\star)$. And then, $$cl_c^{{{{ \overline{\mathfrak{S}}}}}}\left(
\left\{(\alpha''\widetilde{\otimes}\beta''\sqcap_{{}_{{{{ \overline{\mathfrak{S}}}}}}}\!\!\!\alpha'\widetilde{\otimes}\beta')
\,,\,(\alpha\widetilde{\otimes}\beta)^\star\right\}\right)=\{\alpha'\widetilde{\otimes}\beta'\}\;\in \; {\mathcal{K}}({{{ \overline{\mathfrak{S}}}}})$$
(3B) $\alpha''\sqsupseteq_{{}_{\overline{ \mathfrak{S}}_{A}}} \alpha^\star$ or $\beta''\sqsupseteq_{{}_{\overline{ \mathfrak{S}}_{A}}} \beta^\star$. Then, $\alpha''\widetilde{\otimes}\beta''$ is the supremum in ${{{ \overline{\mathfrak{S}}}}}$ of $(\alpha''\widetilde{\otimes}\beta''\sqcap_{{}_{{{{ \overline{\mathfrak{S}}}}}}}\!\!\!\alpha'\widetilde{\otimes}\beta')$ and $(\alpha^\star \widetilde{\otimes}\bot_{{}_{\overline{ \mathfrak{S}}_{B}}})\sqcap_{{}_{{{{ \overline{\mathfrak{S}}}}}}} (\bot_{{}_{\overline{ \mathfrak{S}}_{A}}}\widetilde{\otimes} \beta^\star)$. And then, $$cl_c^{{{{ \overline{\mathfrak{S}}}}}}\left(
\left\{(\alpha''\widetilde{\otimes}\beta''\sqcap_{{}_{{{{ \overline{\mathfrak{S}}}}}}}\!\!\!\alpha'\widetilde{\otimes}\beta')
\,,\,(\alpha\widetilde{\otimes}\beta)^\star\right\}\right)=\{\alpha''\widetilde{\otimes}\beta''\}\;\in \; {\mathcal{K}}({{{ \overline{\mathfrak{S}}}}})$$
(3C) $\alpha'\not\sqsupseteq_{{}_{\overline{ \mathfrak{S}}_{A}}} \alpha^\star$ and $\beta'\not\sqsupseteq_{{}_{\overline{ \mathfrak{S}}_{A}}} \beta^\star$ and $\alpha''\sqsupseteq_{{}_{\overline{ \mathfrak{S}}_{A}}} \alpha^\star$ or $\beta''\sqsupseteq_{{}_{\overline{ \mathfrak{S}}_{A}}} \beta^\star$. Then, we note that $(\alpha''\widetilde{\otimes}\beta'')\sqcap_{{}_{{{{ \overline{\mathfrak{S}}}}}}}\!\!\!(\alpha'\widetilde{\otimes}\beta')$ and $(\alpha^\star \widetilde{\otimes}\bot_{{}_{\overline{ \mathfrak{S}}_{B}}})\sqcap_{{}_{{{{ \overline{\mathfrak{S}}}}}}} (\bot_{{}_{\overline{ \mathfrak{S}}_{A}}}\widetilde{\otimes} \beta^\star)$ have no common upper-bound in ${{{ \overline{\mathfrak{S}}}}}$. We have then to note that $(\alpha'' \widetilde{\otimes}\bot_{{}_{\overline{ \mathfrak{S}}_{B}}})\sqcap_{{}_{{{{ \overline{\mathfrak{S}}}}}}} (\bot_{{}_{\overline{ \mathfrak{S}}_{A}}}\widetilde{\otimes} \beta')$ and $(\alpha^\star \widetilde{\otimes}\bot_{{}_{\overline{ \mathfrak{S}}_{B}}})\sqcap_{{}_{{{{ \overline{S}}}}_{AB}}} (\bot_{{}_{\overline{ \mathfrak{S}}_{A}}}\widetilde{\otimes} \beta^\star)$ admit a supremum in ${{{ \overline{\mathfrak{S}}}}}$ which is $(\alpha'' \widetilde{\otimes}\beta^\star)\sqcap_{{}_{{{{ \overline{\mathfrak{S}}}}}}} (\alpha^\star\widetilde{\otimes} \beta')$. We note also that $(\alpha' \widetilde{\otimes}\bot_{{}_{\overline{ \mathfrak{S}}_{B}}})\sqcap_{{}_{{{{ \overline{\mathfrak{S}}}}}}} (\bot_{{}_{\overline{ \mathfrak{S}}_{A}}}\widetilde{\otimes} \beta'')$ and $(\alpha^\star \widetilde{\otimes}\bot_{{}_{\overline{ \mathfrak{S}}_{B}}})\sqcap_{{}_{{{{ \overline{\mathfrak{S}}}}}}} (\bot_{{}_{\overline{ \mathfrak{S}}_{A}}}\widetilde{\otimes} \beta^\star)$ admit a supremum in ${{{ \overline{\mathfrak{S}}}}}$ which is $(\alpha' \widetilde{\otimes}\beta^\star)\sqcap_{{}_{{{{ \overline{\mathfrak{S}}}}}}} (\alpha^\star\widetilde{\otimes} \beta'')$. Hence, we check directly that 
\begin{eqnarray}
&&{ \mathfrak{c}}^{{{{ \overline{\mathfrak{S}}}}}}\left(
\left\{(\alpha''\widetilde{\otimes}\beta'')\sqcap_{{}_{{{{ \overline{\mathfrak{S}}}}}}}\!\!\!(\alpha'\widetilde{\otimes}\beta')
\,,\,(\alpha\widetilde{\otimes}\beta)^\star\right\}\right) =:U \nonumber\\
&&U= \left\{ (\alpha''\widetilde{\otimes}\beta'')\sqcap_{{}_{{{{ \overline{\mathfrak{S}}}}}}}\!\!\!(\alpha'\widetilde{\otimes}\beta')
\,,\,
(\alpha'' \widetilde{\otimes}\beta^\star)\sqcap_{{}_{{{{ \overline{\mathfrak{S}}}}}}} (\alpha^\star\widetilde{\otimes} \beta')\,,\,
(\alpha' \widetilde{\otimes}\beta^\star)\sqcap_{{}_{{{{ \overline{\mathfrak{S}}}}}}} (\alpha^\star\widetilde{\otimes} \beta'')
\right\}\;\;\;\;\;\;\;\;\;\;\;\;\;\;\;
\end{eqnarray}
It is then rather easy to check that ${ \mathfrak{c}}^{{{{ \overline{\mathfrak{S}}}}}}(U)=U$. Hence, $cl_c^{{{{ \overline{\mathfrak{S}}}}}}(U)=U$. 
Endly, we check directly that $U\in {\mathcal{K}}({{{ \overline{\mathfrak{S}}}}})$ (note that this check uses explicitly the properties  $\alpha'\not\sqsupseteq_{{}_{\overline{ \mathfrak{S}}_{A}}} \alpha^\star$ and $\beta'\not\sqsupseteq_{{}_{\overline{ \mathfrak{S}}_{A}}} \beta^\star$  and $\alpha''\sqsupseteq_{{}_{\overline{ \mathfrak{S}}_{A}}} \alpha^\star$ or $\beta''\sqsupseteq_{{}_{\overline{ \mathfrak{S}}_{A}}} \beta^\star$ assumed in case 3C). As a result
\begin{eqnarray}
&&cl_c^{{{{ \overline{\mathfrak{S}}}}}}\left(
\left\{(\alpha''\widetilde{\otimes}\beta'')\sqcap_{{}_{{{{ \overline{S}}}}_{AB}}}\!\!\!(\alpha'\widetilde{\otimes}\beta')
\,,\,(\alpha\widetilde{\otimes}\beta)^\star\right\}\right) =:U \nonumber\\
&&U= \left\{ (\alpha''\widetilde{\otimes}\beta'')\sqcap_{{}_{{{{ \overline{\mathfrak{S}}}}}}}\!\!\!(\alpha'\widetilde{\otimes}\beta')
\,,\,
(\alpha'' \widetilde{\otimes}\beta^\star)\sqcap_{{}_{{{{ \overline{\mathfrak{S}}}}}}} (\alpha^\star\widetilde{\otimes} \beta')\,,\,
(\alpha' \widetilde{\otimes}\beta^\star)\sqcap_{{}_{{{{ \overline{\mathfrak{S}}}}}}} (\alpha^\star\widetilde{\otimes} \beta'')
\right\}\;\;\;\;\;\;\;\;\;\;\;\;\;\;\;\label{computationexample}
\end{eqnarray}

As a conclusion of this case analysis we have proved the property (\ref{thirdcoveringpropertySbari}). By the way we can check directly on the expressions of $cl_c^{{{{ \overline{\mathfrak{S}}}}}}\left(
\left\{(\alpha''\widetilde{\otimes}\beta'')\sqcap_{{}_{{{{ \overline{\mathfrak{S}}}}}}}\!\!\!(\alpha'\widetilde{\otimes}\beta')
\,,\,(\alpha\widetilde{\otimes}\beta)^\star\right\}\right)$ established along this case analysis the properties (\ref{thirdcoveringpropertySbarii})(\ref{thirdcoveringpropertySbariii})(\ref{thirdcoveringpropertySbariv}) and also the properties (\ref{thirdcoveringpropertySbarviic})(\ref{thirdcoveringpropertySbarviib}) if we restrict ourselves to the case $\alpha=\alpha''$ and $\beta''=\beta$.\\

Let us summarize. If $\chi:=(\alpha\widetilde{\otimes}\beta)^\star \sqcup_{{}_{{{{ \widetilde{\mathfrak{S}}}}}}}
(\alpha\widetilde{\otimes}\beta\sqcap_{{}_{{{{ \overline{\mathfrak{S}}}}}}}\!\!\!\alpha'\widetilde{\otimes}\beta')$ is in ${ \mathfrak{S}}\smallsetminus \overline{ \mathfrak{S}}$, we have
\begin{eqnarray}
\Theta^{{{{ \overline{\mathfrak{S}}}}}}\left(\chi\right)&=&\{\gamma,\delta, \epsilon\}\\
\gamma=(\alpha\widetilde{\otimes}\beta\sqcap_{{}_{{{{ \overline{\mathfrak{S}}}}}}}\!\!\!\alpha'\widetilde{\otimes}\beta'),&& \varphi_\gamma=\alpha\widetilde{\otimes}\beta,\;\;\;\;\psi_\gamma=\alpha'\widetilde{\otimes}\beta',\nonumber\\
\delta=(\alpha\widetilde{\otimes}\beta^\star\sqcap_{{}_{{{{ \overline{\mathfrak{S}}}}}}}\!\!\!\alpha^\star\widetilde{\otimes}\beta'),&& \varphi_\delta=\alpha\widetilde{\otimes}\beta^\star,\;\;\;\;\psi_\delta=\alpha^\star\widetilde{\otimes}\beta',\nonumber\\
\epsilon=(\alpha'\widetilde{\otimes}\beta^\star\sqcap_{{}_{{{{ \overline{\mathfrak{S}}}}}}}\!\!\!\alpha^\star\widetilde{\otimes}\beta),&& \varphi_\epsilon=\alpha^\star\widetilde{\otimes}\beta,\;\;\;\;\psi_\epsilon=\alpha'\widetilde{\otimes}\beta^\star,\nonumber\\
\lambda_{\delta\epsilon}=\alpha^\star\widetilde{\otimes}\beta^\star,&& \mu_{\gamma\delta}=\alpha\widetilde{\otimes}\beta',\;\;\;\;\mu_{\gamma\epsilon}=\alpha'\widetilde{\otimes}\beta.\nonumber
\end{eqnarray}
The collections $(\varphi_{\epsilon})_{\epsilon},(\psi_{\epsilon})_{\epsilon} \in (\overline{ \mathfrak{S}}{}^{{}^{pure}})^{\Theta^{\overline{ \mathfrak{S}}}(\chi)}$ given above satisfy the properties (\ref{thirdcoveringpropertySbar0})(\ref{thirdcoveringpropertySbarviic})(\ref{thirdcoveringpropertySbarviib}). They are {unique} up to a permutation on the set $\{\delta,\epsilon\}$ because of the following basic property
\begin{eqnarray}
\forall \alpha\in \{\gamma,\delta, \epsilon\},&& \underline{\,\alpha\,}_{{{{{ \overline{\mathfrak{S}}}}}}}=\{\varphi_\alpha,\psi_\alpha\}
\end{eqnarray}
inherited from (\ref{developmentetildeordersimplify}). \\ 
The properties (\ref{thirdcoveringpropertySbarviia})(\ref{thirdcoveringpropertySbarviiabis}) can be checked directly.\\

Concerning the property (\ref{thirdcoveringpropertySbarvi}), it suffice to compute
\begin{eqnarray}
&&cl_c^{{{{ \overline{\mathfrak{S}}}}}}\left(
\left\{\left((\alpha\widetilde{\otimes}\beta)\sqcap_{{}_{{{{ \overline{\mathfrak{S}}}}}}}\!\!\!(\alpha'\widetilde{\otimes}\beta')\right)^\star
\,,\, 
(\alpha \widetilde{\otimes}\beta^\star)\sqcap_{{}_{{{{ \overline{\mathfrak{S}}}}}}} (\alpha^\star\widetilde{\otimes} \beta')
\right\}\right) = \nonumber\\
&&= \left\{ (\alpha^\star\widetilde{\otimes}\beta'{}^\star)\sqcap_{{}_{{{{ \overline{\mathfrak{S}}}}}}}\!\!\!(\alpha'{}^\star\widetilde{\otimes}\beta^\star)
\,,\,
(\alpha \widetilde{\otimes}\beta^\star)\sqcap_{{}_{{{{ \overline{\mathfrak{S}}}}}}} (\alpha^\star\widetilde{\otimes} \beta')\,,\,
(\alpha \widetilde{\otimes}\beta'{}^\star)\sqcap_{{}_{{{{ \overline{\mathfrak{S}}}}}}} (\alpha'{}^\star\widetilde{\otimes} \beta')
\right\}\;\;\;\;\;\;\;\;\;\;\;\;\;\;\;
\end{eqnarray}
and to observe then that
\begin{eqnarray}
&&cl_c^{{{{ \overline{\mathfrak{S}}}}}}\left(
\left\{\left((\alpha\widetilde{\otimes}\beta)\sqcap_{{}_{{{{ \overline{\mathfrak{S}}}}}}}\!\!\!(\alpha'\widetilde{\otimes}\beta')\right)^\star
\,,\, 
(\alpha \widetilde{\otimes}\beta^\star)\sqcap_{{}_{{{{ \overline{\mathfrak{S}}}}}}} (\alpha^\star\widetilde{\otimes} \beta')
\right\}\right) = \nonumber\\
&&cl_c^{{{{ \overline{\mathfrak{S}}}}}}\left(\left\{ 
(\alpha^\star\widetilde{\otimes}\beta'{}^\star)^\star
\,,\, 
(\alpha^\star\widetilde{\otimes}\beta'{}^\star)\sqcap_{{}_{{{{ \overline{\mathfrak{S}}}}}}}\!\!\!(\alpha'{}^\star\widetilde{\otimes}\beta^\star)
\right\}\right)
\end{eqnarray}

\end{proof}

Now, we consider $N$ copies of one-dimensional indeterministic spaces of states ${{{ \overline{\mathfrak{S}}}}}_{A_i}={\mathfrak{Z}}'_{M_{A_i}}$ for $i=1,\cdots,N$ ($M_{A_i}\in \mathbb{N}$) and we now consider the real structure defined by ${{{ \overline{S}}}}_{A_1\cdots A_N}:={{{ \overline{\mathfrak{S}}}}}_{A_1}\widetilde{\otimes}\cdots \widetilde{\otimes}{{{ \overline{\mathfrak{S}}}}}_{A_N}$. We denote ${ \mathfrak{S}}_{A_i}={ \mathfrak{J}}^c_{{\overline{ \mathfrak{S}}_{A_i}}}$ for $i=1,\cdots,N$. And we introduce, as before, the tensor product ${{{ \hat{S}}}}_{A_1\cdots A_N}:={ \mathfrak{J}}^c_{\overline{{ \mathfrak{S}}}_{A_1}\widehat{\otimes} \cdots \widehat{\otimes} \overline{{ \mathfrak{S}}}_{A_N}}$. This space of states will be said to be the $N-$dimensional indeterministic space of states. If necessary, in order to shorten our formulas, we will denote ${{{ \overline{S}}}}{}_{A_1\cdots A_N}$ simply by $\overline{\mathfrak{S}}$ and ${{{\hat{S}}}}{}_{A_1\cdots A_N}$ simply by ${\mathfrak{S}}$.\\

For simplicity, we will restrict ourselves to the case $N=3$ in order to clarify a central aspect of the iteration of the tensor product.%We will consider the three copies of one-dimensional indeterministic spaces of states ${{{ \overline{\mathfrak{S}}}}}_{A}={\mathfrak{Z}}'_{N_A}$, ${{{ \overline{\mathfrak{S}}}}}_{B}={\mathfrak{Z}}'_{N_B}$ ${{{ \overline{\mathfrak{S}}}}}_{C}={\mathfrak{Z}}'_{N_C}$ and we now consider the real structure defined by ${{{ \overline{S}}}}_{ABC}:={{{ \overline{\mathfrak{S}}}}}_{A}\widetilde{\otimes}{{{ \overline{\mathfrak{S}}}}}_{B}\widetilde{\otimes}{{{ \overline{\mathfrak{S}}}}}_{C}$. We denote ${ \mathfrak{S}}_A={ \mathfrak{J}}^c_{{\overline{ \mathfrak{S}}_A}}$, ${ \mathfrak{S}}_B={ \mathfrak{J}}^c_{{\overline{ \mathfrak{S}}_B}}$, ${ \mathfrak{S}}_C={ \mathfrak{J}}^c_{{\overline{ \mathfrak{S}}_C}}$ and we denote the candidate for the tensor product ${{{ \widetilde{S}}}}_{ABC}:={ \mathfrak{S}}_{A}\boxtimes { \mathfrak{S}}_{B}\boxtimes { \mathfrak{S}}_{C}$. \\

We will then consider $\sigma_{A_1},\sigma_{A_1}',\sigma_{A_1}''\in {{{ \overline{\mathfrak{S}}}}}_{A_1}$,  $\sigma_{A_2},\sigma_{A_2}',\sigma_{A_2}''\in {{{ \overline{\mathfrak{S}}}}}_{A_2}$, $\sigma_{A_3},\sigma_{A_3}',\sigma_{A_3}''\in {{{ \overline{\mathfrak{S}}}}}_{A_3}$. We will suppose that $\sigma_{A_1}'\not=\sigma_{A_1}''$ and $\sigma_{A_1}^\star\not=\sigma'_{A_1},\sigma''_{A_1}$, and $\sigma_{A_2}'\not=\sigma_{A_2}''$ and $\sigma_{A_2}^\star\not=\sigma'_{A_2},\sigma_{A_2}''$,  and $\sigma_{A_3}'\not=\sigma_{A_3}''$ and $\sigma_{A_3}^\star\not=\sigma'_{A_3},\sigma''_{A_3}$. 

From the following properties
\begin{eqnarray}
&&\hspace{-1.5cm}(\sigma_{A_1}\widetilde{\otimes}\sigma_{A_2}\widetilde{\otimes}\sigma_{A_3})^\star \sqcup_{{}_{{{{ \overline{S}}}}_{A_1A_2A_3}}} (\bot_{{}_{{{{ \overline{\mathfrak{S}}}}}_{A_1}}}\widetilde{\otimes}\sigma_{A_2}''\widetilde{\otimes}\sigma_{A_3}'' \sqcap_{{}_{{{{ \overline{S}}}}_{A_1A_2A_3}}} \sigma'_{A_1}\widetilde{\otimes}\sigma'_{A_2}\widetilde{\otimes}\sigma'_{A_3})=\sigma_{A_1}^\star\widetilde{\otimes}\sigma_{A_2}''\widetilde{\otimes}\sigma_{A_3}''\;\;\;\;\;\;\;\;\;\;\;\;\;\;\;\;\;\;\;\;\\
&&\hspace{-1.5cm}(\sigma_{A_1}\widetilde{\otimes}\sigma_{A_2}\widetilde{\otimes}\sigma_{A_3})^\star \sqcup_{{}_{{{{ \overline{S}}}}_{A_1A_2A_3}}} (\sigma_{A_1}''\widetilde{\otimes}\sigma_{A_2}''\widetilde{\otimes}\sigma_{A_3}'' \sqcap_{{}_{{{{ \overline{S}}}}_{A_1A_2A_3}}} \bot_{{}_{{{{ \overline{\mathfrak{S}}}}}_{A_1}}}\widetilde{\otimes}\sigma'_{A_2}\widetilde{\otimes}\sigma'_{A_3})=\sigma_{A_1}^\star\widetilde{\otimes}\sigma'_{A_2}\widetilde{\otimes}\sigma'_{A_3}\;\;\;\;\;\;\;\;\;\;\;\;\;\;\;\;\;\;\;\;
\end{eqnarray}
we deduce
\begin{eqnarray}
&&\hspace{-1.2cm}\left\{ \sigma_{A_1}^\star\widetilde{\otimes}\sigma_{A_2}''\widetilde{\otimes}\sigma_{A_3}''\,,\, \sigma_{A_1}^\star\widetilde{\otimes}\sigma'_{A_2}\widetilde{\otimes}\sigma'_{A_3}\right\} \sqsubseteq \nonumber\\
&&{\mathfrak{c}}^{{{{ \overline{S}}}}_{A_1A_2A_3}}(\left\{(\sigma_{A_1}\widetilde{\otimes}\sigma_{A_2}\widetilde{\otimes}\sigma_{A_3})^\star\,,\, (\sigma_{A_1}''\widetilde{\otimes}\sigma_{A_2}''\widetilde{\otimes}\sigma_{A_3}'' \sqcap_{{}_{{{{ \overline{S}}}}_{A_1A_2A_3}}} \sigma'_{A_1}\widetilde{\otimes}\sigma'_{A_2}\widetilde{\otimes}\sigma'_{A_3})\right\})\;\;\;\;\;\;\;\;\;\;\;\;\;\;
\end{eqnarray}
Now we note that 
$\alpha:=\sigma_{A_1}^\star\widetilde{\otimes}\sigma_{A_2}''\widetilde{\otimes}\sigma_{A_3}'',$ and $\beta:=\sigma_{A_1}^\star\widetilde{\otimes}\sigma'_{A_2}\widetilde{\otimes}\sigma'_{A_3}$ are two distinct elements of ${{{ \overline{S}}}}{}_{A_1A_2A_3}^{{}^{pure}}$ and $(\alpha\sqcap_{{}_{{{{ \overline{S}}}}{}_{A_1A_2A_3}}}\beta)\not= \bot_{{}_{{{{{ \overline{S}}}}{}_{A_1A_2A_3}}}}$ and that $\forall \gamma,\gamma'\in {{{ \overline{S}}}}{}_{A_1A_2A_3}^{{}^{pure}}$ we have $(\gamma\sqcap_{{}_{{{{{{ \overline{S}}}}{}_{A_1A_2A_3}}}}}\gamma')\sqcoversubset_{{}_{{{{{{ \overline{S}}}}{}_{A_1A_2A_3}}}}}\gamma,\gamma'$. In other words, the conditions of Theorem \ref{theoremNOcompleteontic} are satisfied. Then, 
$cl^{{{{{ \overline{S}}}}{}_{A_1A_2A_3}}}(\{\alpha,\beta\})\notin {\mathcal{K}}({{{{ \overline{S}}}}{}_{A_1A_2A_3}})$.\\
As a result, we have obtained 
\begin{eqnarray}
\hspace{-1cm}cl_c^{{{{ \overline{S}}}}_{A_1A_2A_3}}(\left\{(\sigma_{A_1}\widetilde{\otimes}\sigma_{A_2}\widetilde{\otimes}\sigma_{A_3})^\star\,,\, (\sigma_{A_1}''\widetilde{\otimes}\sigma_{A_2}''\widetilde{\otimes}\sigma_{A_3}'' \sqcap_{{}_{{{{ \overline{S}}}}_{A_1A_2A_3}}} \sigma'_{A_1}\widetilde{\otimes}\sigma'_{A_2}\widetilde{\otimes}\sigma'_{A_3})\right\})&\notin & {\mathcal{K}}({{{{ \overline{S}}}}{}_{A_1A_2A_3}})\;\;\;\;\;\;\;\;\;\;\;\;\;\;
\end{eqnarray}
In other words,  as long as $\sigma_{A_1}'\not=\sigma_{A_1}''$ and $\sigma_{A_1}^\star\not=\sigma'_{A_1},\sigma''_{A_1}$,  and $\sigma_{A_2}'\not=\sigma_{A_2}''$ and $\sigma_{A_2}^\star\not=\sigma'_{A_2},\sigma_{A_2}''$,  and $\sigma_{A_3}'\not=\sigma_{A_3}''$  and $\sigma_{A_3}^\star\not=\sigma'_{A_3},\sigma''_{A_3}$, we have
\begin{eqnarray}
\hspace{-1.5cm}\nexists  \alpha\in {{{{ \widetilde{S}}}}{}_{A_1A_2A_3}} &\vert &(\,(\sigma_{A_1}''\widetilde{\otimes}\sigma_{A_2}''\widetilde{\otimes}\sigma_{A_3}'' \sqcap_{{}_{{{{ \overline{S}}}}_{A_1A_2A_3}}} \sigma'_{A_1}\widetilde{\otimes}\sigma'_{A_2}\widetilde{\otimes}\sigma'_{A_3}) \sqsubseteq_{{}_{{{{ \widetilde{S}}}}_{A_1A_2A_3}}}\!\! \alpha\;\;\textit{\rm and}\;\; \sigma_{A_1}\widetilde{\otimes}\sigma_{A_2}\widetilde{\otimes}\sigma_{A_3} \;\underline{\perp}\; \alpha\,).\;\;\;\;\;\;\;\;\;\;\;\;
\end{eqnarray}

If, on the contrary, the condition ($\sigma_{A_1}'\not=\sigma_{A_1}''$, $\sigma_{A_2}'\not=\sigma_{A_2}''$, $\sigma_{A_3}'\not=\sigma_{A_3}''$) is not satisfied anymore (we maintain nevertheless the conditions $\sigma_{A_1}^\star\not=\sigma'_{A_1},\sigma''_{A_1}$, $\sigma_{A_2}^\star\not=\sigma'_{A_2},\sigma_{A_2}''$, $\sigma_{A_3}^\star\not=\sigma'_{A_3},\sigma''_{A_3}$ because, if this condition is not satisfied, the result is easy to compute), we have then first of all
\begin{eqnarray}
\hspace{-1.5cm}\exists  \alpha\in {{{{ \widetilde{S}}}}{}_{A_1A_2A_3}} &\vert &(\,(\sigma_{A_1}'\widetilde{\otimes}\sigma_{A_2}''\widetilde{\otimes}\sigma_{A_3}'' \sqcap_{{}_{{{{ \overline{S}}}}_{A_1A_2A_3}}} \sigma'_{A_1}\widetilde{\otimes}\sigma'_{A_2}\widetilde{\otimes}\sigma'_{A_3}) \sqsubseteq_{{}_{{{{ \widetilde{S}}}}_{A_1A_2A_3}}}\!\! \alpha\;\;\textit{\rm and}\;\; \sigma_{A_1}\widetilde{\otimes}\sigma_{A_2}\widetilde{\otimes}\sigma_{A_3} \;\underline{\perp}\; \alpha\,).\;\;\;\;\;\;\;\;\;\;\;\;\nonumber\\
%&&\hspace{-2.5cm}(\sigma_{A_1}\widetilde{\otimes}\sigma_{A_2}\widetilde{\otimes}\sigma_{A_3})^\star\sqcup_{{}_{{{{ \hat{S}}}}_{A_1A_2A_3}}} (\sigma_{A_1}'\widetilde{\otimes}\sigma_{A_2}''\widetilde{\otimes}\sigma_{A_3}'' \sqcap_{{}_{{{{ \overline{S}}}}_{A_1A_2A_3}}} \sigma'_{A_1}\widetilde{\otimes}\sigma'_{A_2}\widetilde{\otimes}\sigma'_{A_3})=\nonumber\\
&&\alpha=\sigma'_{A_1} \widetilde{\otimes} \left((\sigma_{A_2}''\widetilde{\otimes}\sigma_{A_3}''\sqcap_{{}_{{{{ \overline{S}}}}_{A_2A_3}}} \sigma'_{A_2}\widetilde{\otimes}\sigma'_{A_3})\sqcup_{{}_{{{{ \hat{S}}}}_{A_2A_3}}}(\sigma_{A_2}\widetilde{\otimes}\sigma_{A_3})^\star \right)
\end{eqnarray}
i.e.
\begin{eqnarray}
%&&\hspace{-1cm}cl_c^{{{{ \overline{S}}}}_{A_1A_2A_3}}(\left\{(\sigma_{A_1}\widetilde{\otimes}\sigma_{A_2}\widetilde{\otimes}\sigma_{A_3})^\star\,,\, (\sigma_{A_1}'\widetilde{\otimes}\sigma_{A_2}''\widetilde{\otimes}\sigma_{A_3}'' \sqcap_{{}_{{{{ \overline{S}}}}_{A_1A_2A_3}}} \sigma_{A_1}'\widetilde{\otimes}\sigma'_{A_2}\widetilde{\otimes}\sigma'_{A_3})\right\})=\;\;\;\;\;\;\;\;\;\;\;\;\;\;\nonumber\\
&&\hspace{-1.8cm}\Theta^{{{{ \overline{S}}}}_{A_1A_2A_3}}(\alpha)=\left\{ (\sigma_{A_1}'\widetilde{\otimes}\sigma_{A_2}''\widetilde{\otimes}\sigma_{A_3}'' \sqcap_{{}_{{{{ \overline{S}}}}_{A_1A_2A_3}}} \sigma_{A_1}'\widetilde{\otimes}\sigma'_{A_2}\widetilde{\otimes}\sigma'_{A_3})\,,\, 
(\sigma_{A_1}'\widetilde{\otimes}\sigma_{A_2}^\star\widetilde{\otimes}\sigma_{A_3}'' \sqcap_{{}_{{{{ \overline{S}}}}_{A_1A_2A_3}}} \sigma_{A_1}'\widetilde{\otimes}\sigma'_{A_2}\widetilde{\otimes}\sigma_{A_3}^\star)\,,\,
\right.\nonumber\\
&&\hspace{3cm}\left. (\sigma_{A_1}'\widetilde{\otimes}\sigma_{A_2}''\widetilde{\otimes}\sigma_{A_3}^\star \sqcap_{{}_{{{{ \overline{S}}}}_{A_1A_2A_3}}} \sigma_{A_1}'\widetilde{\otimes}\sigma_{A_2}^\star\widetilde{\otimes}\sigma'_{A_3})\right\}\label{expressionchiN=3}
\end{eqnarray}
and we have also
\begin{eqnarray}
\hspace{-1.5cm}\exists  \alpha\in {{{{ \widetilde{S}}}}{}_{A_1A_2A_3}} &\vert &(\,(\sigma_{A_1}'\widetilde{\otimes}\sigma_{A_2}'\widetilde{\otimes}\sigma_{A_3}'' \sqcap_{{}_{{{{ \overline{S}}}}_{A_1A_2A_3}}} \sigma'_{A_1}\widetilde{\otimes}\sigma'_{A_2}\widetilde{\otimes}\sigma'_{A_3}) \sqsubseteq_{{}_{{{{ \widetilde{S}}}}_{A_1A_2A_3}}}\!\! \alpha\;\;\textit{\rm and}\;\; \sigma_{A_1}\widetilde{\otimes}\sigma_{A_2}\widetilde{\otimes}\sigma_{A_3} \;\underline{\perp}\; \alpha\,).\;\;\;\;\;\;\;\;\;\;\;\;\nonumber\\
%&&\hspace{-2.5cm}(\sigma_{A_1}\widetilde{\otimes}\sigma_{A_2}\widetilde{\otimes}\sigma_{A_3})^\star\sqcup_{{}_{{{{ \hat{S}}}}_{A_1A_2A_3}}} (\sigma_{A_1}'\widetilde{\otimes}\sigma_{A_2}''\widetilde{\otimes}\sigma_{A_3}'' \sqcap_{{}_{{{{ \overline{S}}}}_{A_1A_2A_3}}} \sigma'_{A_1}\widetilde{\otimes}\sigma'_{A_2}\widetilde{\otimes}\sigma'_{A_3})=\nonumber\\
&&\alpha=\sigma'_{A_1} \widetilde{\otimes} \sigma_{A_2}'\widetilde{\otimes} \left((\sigma_{A_3}''\sqcap_{{}_{{{{ \overline{S}}}}_{A_3}}} \sigma'_{A_3})\sqcup_{{}_{{{{ \hat{S}}}}_{A_3}}}\sigma_{A_3}^\star \right)
\end{eqnarray}
i.e.
\begin{eqnarray}
&&%\hspace{-1cm}(\sigma_{A_1}\widetilde{\otimes}\sigma_{A_2}\widetilde{\otimes}\sigma_{A_3})^\star\sqcup_{{}_{{{{ \hat{S}}}}_{A_1A_2A_3}}} (\sigma_{A_1}'\widetilde{\otimes}\sigma_{A_2}'\widetilde{\otimes}\sigma_{A_3}'' \sqcap_{{}_{{{{ \overline{S}}}}_{A_1A_2A_3}}} \sigma_{A_1}'\widetilde{\otimes}\sigma_{A_2}'\widetilde{\otimes}\sigma'_{A_3})
\hspace{-1.8cm}\Theta^{{{{ \overline{S}}}}_{A_1A_2A_3}}(\alpha)=\{\sigma_{A_1}' \widetilde{\otimes} \sigma_{A_2}'\widetilde{\otimes}\sigma_{A_3}^\star\}.\;\;\;\;\;\;\;\;\;\;\;\;\;\;\;\;
\end{eqnarray}
The results obtained by flipping the tensors and exchanging the corresponding indices are obvious.\\

Let us summarize the previous result, but before that, let us introduce some notations.

\begin{definition}\label{definwr}
Let us consider $\nu$  and $\phi$ two  elements of ${{{ \overline{\mathfrak{S}}}}}{}^{{}^{pure}}$. We will define the following reflexive and symmetric binary relation
\begin{eqnarray}
\hspace{-1cm}\nu \wr \phi\;\;&:\Leftrightarrow & (\,\exists I\subseteq\{1,\cdots,N\}, Card(I)\geq N-2\;\vert\; \forall i\in I, \zeta^{{}^{{\mathfrak{S}}_{A_1}\cdots{\mathfrak{S}}_{A_N}}}_{(i)}(\nu) = \zeta^{{}^{{\mathfrak{S}}_{A_1}\cdots{\mathfrak{S}}_{A_N}}}_{(i)}(\phi)\,).\;\;\;\;\;\;\;\;\;\;\;\;
\end{eqnarray}
\end{definition}

\begin{definition}\label{definUpsilonDelta}
Let us fix $\nu\in {{{ \overline{\mathfrak{S}}}}}{}^{{}^{pure}}$ and $j,k\in \{1,\cdots,N\}$ with $j<k$. 
Let us denote, for any $\kappa\in {{{ \overline{\mathfrak{S}}}}}{}^{{}^{pure}}$, 
\begin{eqnarray}
\Upsilon^{(j,k)}_{\kappa}:=\zeta^{{}^{{\mathfrak{S}}_{A_1}\cdots{\mathfrak{S}}_{A_N}}}_{(j)}(\kappa)\widetilde{\otimes}\zeta^{{}^{{\mathfrak{S}}_{A_1}\cdots{\mathfrak{S}}_{A_N}}}_{(k)}(\kappa)\in \overline{\mathfrak{S}}{}_{A_j}^{{}^{pure}}\widetilde{\otimes}\overline{\mathfrak{S}}{}_{A_k}^{{}^{pure}}
\end{eqnarray}
and, for any $(\alpha\widetilde{\otimes}\beta)\in \overline{\mathfrak{S}}{}_{A_j}^{{}^{pure}}\widetilde{\otimes}\overline{\mathfrak{S}}{}_{A_k}^{{}^{pure}}$,
\begin{eqnarray}
\Delta^{(j,k);\nu}_{\alpha\widetilde{\otimes}\beta}:=\delta_1\widetilde{\otimes}\cdots\widetilde{\otimes}\delta_{j-1}\widetilde{\otimes}\alpha\widetilde{\otimes}\delta_{j+1}\widetilde{\otimes}\cdots\widetilde{\otimes}\delta_{k-1}\widetilde{\otimes}\beta \widetilde{\otimes} \delta_{k+1}\widetilde{\otimes}\cdots\widetilde{\otimes}\delta_{N}
\end{eqnarray}
with $\delta_m:=\zeta^{{}^{{\mathfrak{S}}_{A_1}\cdots{\mathfrak{S}}_{A_N}}}_{(m)}(\nu)$ for any $m\in \{1,\cdots,N\}$. 
\end{definition}

\begin{theorem}\label{lambdawr}
Let us consider $\nu$   
and $\phi$ 
 two distinct elements of ${{{ \overline{\mathfrak{S}}}}}{}^{{}^{pure}}$ and $\mu$ 
 another element of ${{{ \overline{\mathfrak{S}}}}}{}^{{}^{pure}}$.  Let us assume that $\mu \not\!\!\!\underline{\perp}\nu,\phi$. 
 We have
\begin{eqnarray}
&&\hspace{-2cm}(\,\exists \chi\in ({{{ \widetilde{\mathfrak{S}}}}}\smallsetminus \overline{\mathfrak{S}}) \;\vert\; (\,(\nu\sqcap_{{}_{{{{ \overline{\mathfrak{S}}}}}}}\phi) \sqcoversubset_{{}_{{{{ \widetilde{\mathfrak{S}}}}}}}\chi \;\;\;\;\textit{\rm and}\;\;\;\; \chi\,\underline{\perp}\,\mu\,)\,)\;\;\Leftrightarrow\\
&&\hspace{-2cm}(\,\exists I\subseteq\{1,\cdots,N\}, Card(I)= N-2\;\vert\; \nonumber\\
&&\hspace{-1.5cm}\forall i\in I, \forall \sigma,\kappa\in \overline{\mathfrak{S}}{}^{{}^{pure}},(\,(\sigma \sqcap_{{}_{{{{ \overline{\mathfrak{S}}}}}}}\chi) \sqcoversubset_{{}_{{{{ \widetilde{\mathfrak{S}}}}}}}\sigma,\chi,\;\;(\kappa \sqcap_{{}_{{{{ \overline{\mathfrak{S}}}}}}}\chi) \sqcoversubset_{{}_{{{{ \widetilde{\mathfrak{S}}}}}}}\kappa,\chi\,)\;\;\Rightarrow\;\; \zeta^{{}^{{\mathfrak{S}}_{A_1}\cdots{\mathfrak{S}}_{A_N}}}_{(i)}(\sigma) = \zeta^{{}^{{\mathfrak{S}}_{A_1}\cdots{\mathfrak{S}}_{A_N}}}_{(i)}(\kappa)\,). \nonumber
\end{eqnarray}
In particular, $\exists \chi\in {{{ \widetilde{\mathfrak{S}}}}}\;\vert\; (\,(\nu\sqcap_{{}_{{{{ \overline{\mathfrak{S}}}}}}}\phi) \sqcoversubset_{{}_{{{{ \widetilde{\mathfrak{S}}}}}}}\chi \;\;\;\;\textit{\rm and}\;\;\;\; \chi\,\underline{\perp}\,\mu\,)\,)$ implies $\forall i\in I,\;\; \zeta^{{}^{{\mathfrak{S}}_{A_1}\cdots{\mathfrak{S}}_{A_N}}}_{(i)}(\nu) = \zeta^{{}^{{\mathfrak{S}}_{A_1}\cdots{\mathfrak{S}}_{A_N}}}_{(i)}(\phi)$.\\
 
%Let us denote $\mu:=\sigma_{A_1}\widetilde{\otimes}\cdots\widetilde{\otimes}\sigma_{A_N}$, $\nu:=\sigma_{A_1}'\widetilde{\otimes}\cdots\widetilde{\otimes}\sigma_{A_N}'$ and $\phi:=\sigma_{A_1}''\widetilde{\otimes}\cdots\widetilde{\otimes}\sigma_{A_N}''$. 
Let us denote $j,k\in \{1,\cdots,N\}$ with $j<k$ such that for any $i\in \{1,\cdots,N\}\smallsetminus \{j,k\}$ we have 
$\zeta^{{}^{{\mathfrak{S}}_{A_1}\cdots{\mathfrak{S}}_{A_N}}}_{(i)}(\nu) = \zeta^{{}^{{\mathfrak{S}}_{A_1}\cdots{\mathfrak{S}}_{A_N}}}_{(i)}(\phi)$.  Endly, let us denote $\alpha:=\zeta^{{}^{{\mathfrak{S}}_{A_1}\cdots{\mathfrak{S}}_{A_N}}}_{(j)}(\mu)$, $\beta:=\zeta^{{}^{{\mathfrak{S}}_{A_1}\cdots{\mathfrak{S}}_{A_N}}}_{(k)}(\mu)$, 
$\alpha':=\zeta^{{}^{{\mathfrak{S}}_{A_1}\cdots{\mathfrak{S}}_{A_N}}}_{(j)}(\nu)$, $\beta':=\zeta^{{}^{{\mathfrak{S}}_{A_1}\cdots{\mathfrak{S}}_{A_N}}}_{(k)}(\nu)$, 
$\alpha'':=\zeta^{{}^{{\mathfrak{S}}_{A_1}\cdots{\mathfrak{S}}_{A_N}}}_{(j)}(\phi)$, $\beta'':=\zeta^{{}^{{\mathfrak{S}}_{A_1}\cdots{\mathfrak{S}}_{A_N}}}_{(k)}(\phi)$.\\
As long as $\lambda:=\mu\sqcup_{{}_{\mathfrak{S}}}(\nu\sqcap_{{}_{\overline{\mathfrak{S}}}}\phi)$ exists in ${{\mathfrak{S}}}\smallsetminus {\overline{\mathfrak{S}}}$, we have (see (\ref{computationexample}))
\begin{eqnarray}
\Theta^{\overline{\mathfrak{S}}}(\lambda) &=& \left\{ \Delta^{(j,k);\nu}_{\alpha''\widetilde{\otimes}\beta''}\sqcap_{{}_{{{{ \overline{\mathfrak{S}}}}}}}\!\!\!\Delta^{(j,k);\nu}_{\alpha'\widetilde{\otimes}\beta'}
\,,\,
\Delta^{(j,k);\nu}_{\alpha'' \widetilde{\otimes}\beta^\star}\sqcap_{{}_{{{{ \overline{\mathfrak{S}}}}}}} \Delta^{(j,k);\nu}_{\alpha^\star\widetilde{\otimes} \beta'}\,,\,
\Delta^{(j,k);\nu}_{\alpha' \widetilde{\otimes}\beta^\star}\sqcap_{{}_{{{{ \overline{\mathfrak{S}}}}}}} \Delta^{(j,k);\nu}_{\alpha^\star\widetilde{\otimes} \beta''}
\right\}\;\;\;\;\;\;\;\;\;\;\;\;\;\;\;
\end{eqnarray}
\end{theorem}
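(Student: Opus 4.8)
The strategy is to reduce everything to the bipartite computations already established in the first theorem of this subsection (the case analysis culminating in formula (\ref{computationexample})), by systematically using the partial traces $\zeta^{{}^{{\mathfrak{S}}_{A_1}\cdots{\mathfrak{S}}_{A_N}}}_{(i)}$ to "factor out" the $N-2$ coordinates in which $\nu$ and $\phi$ agree. I would first dispose of the forward implication in the stated equivalence. Suppose $\chi\in {{{ \widetilde{\mathfrak{S}}}}}\smallsetminus \overline{\mathfrak{S}}$ satisfies $(\nu\sqcap_{{}_{{{{ \overline{\mathfrak{S}}}}}}}\phi)\sqcoversubset_{{}_{{{{ \widetilde{\mathfrak{S}}}}}}}\chi$ and $\chi\,\underline{\perp}\,\mu$. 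Writing $\nu=\delta_1^{(\nu)}\widetilde{\otimes}\cdots\widetilde{\otimes}\delta_N^{(\nu)}$ and $\phi=\delta_1^{(\phi)}\widetilde{\otimes}\cdots\widetilde{\otimes}\delta_N^{(\phi)}$, the infimum $\nu\sqcap_{{}_{{{{ \overline{\mathfrak{S}}}}}}}\phi$ is controlled coordinatewise by the expansion (\ref{developmentetildeordersimplify}); the condition that $\chi$ is \emph{hidden} (not in $\overline{\mathfrak{S}}$) forces, by the $N=3$ computation displayed above (and its obvious $N$-fold iterate), that $\nu$ and $\phi$ differ in at most two coordinates — precisely because if they differed in three or more coordinates the closure $cl_c^{\overline{\mathfrak{S}}}$ of the relevant two-element set would fail to lie in ${\mathcal{K}}(\overline{\mathfrak{S}})$, so no orthogonal hidden state could exist (this is exactly the content of Theorem \ref{theoremNOcompleteontic} applied to the iterated tensor product). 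Thus $\nu\wr\phi$ in the sense of Definition \ref{definwr}, and moreover one identifies the index set $I=\{1,\dots,N\}\smallsetminus\{j,k\}$ on which $\zeta_{(i)}(\nu)=\zeta_{(i)}(\phi)$; applying $\zeta_{(i)}$ to any pure $\sigma,\kappa$ covering $\sigma\sqcap\chi,\kappa\sqcap\chi$ and using that $\zeta_{(i)}$ is a homomorphism (Theorem \ref{theorempartialtracesminimal}) gives the asserted coordinate-agreement.

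For the reverse implication I would argue directly by exhibiting $\chi$. Assume $\nu\wr\phi$ with agreement on $I=\{1,\dots,N\}\smallsetminus\{j,k\}$, set $\alpha'=\zeta_{(j)}(\nu)$, $\beta'=\zeta_{(k)}(\nu)$, $\alpha''=\zeta_{(j)}(\phi)$, $\beta''=\zeta_{(k)}(\phi)$, $\alpha=\zeta_{(j)}(\mu)$, $\beta=\zeta_{(k)}(\mu)$. The hypothesis $\mu\not\!\!\!\underline{\perp}\nu,\phi$ translates, via the orthogonality computation from Theorem \ref{theoremirreducibilitytensor}'s proof (i.e. $\mu\underline{\perp}\rho$ iff some coordinate of $\mu^\star$ is dominated by the corresponding coordinate of $\rho$), into conditions on the pairs $(\alpha,\alpha',\alpha'')$ and $(\beta,\beta',\beta'')$ of the form $\alpha^\star\not\sqsubseteq\alpha',\alpha''$ etc. I then take $\chi:=\mu\sqcup_{{}_{\mathfrak{S}}}(\nu\sqcap_{{}_{\overline{\mathfrak{S}}}}\phi)=\lambda$, which exists in ${\mathfrak{S}}$ by property (\ref{thirdcoveringpropertySbari}) (the supremum $(\mu^\star\sqcup_{{}_{{ \mathfrak{S}}}}\gamma)$ exists, with $\gamma=\nu\sqcap\phi$); and I must check it is the required object: that $(\nu\sqcap\phi)\sqcoversubset_{{}_{\mathfrak{S}}}\chi$, which is property (\ref{thirdcoveringpropertySbarii}) read along the $j,k$ slice via $\Delta^{(j,k);\nu}$, and that $\chi\underline{\perp}\mu$, which follows since $\chi\sqsupseteq\mu^\star$ by construction. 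The case distinction between $\chi\in\overline{\mathfrak{S}}$ and $\chi\in{\mathfrak{S}}\smallsetminus\overline{\mathfrak{S}}$ is handled exactly as in the $N=3$ paragraph: when the three relevant coordinates of $\mu,\nu,\phi$ in slots $j,k$ are pairwise distinct in the obstructing way, $\chi$ is genuinely hidden.

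Finally, for the explicit formula for $\Theta^{\overline{\mathfrak{S}}}(\lambda)$, I would apply $cl_c^{\overline{\mathfrak{S}}}$ to the two-element set $\{\mu^\star,\ \Delta^{(j,k);\nu}_{\alpha''\widetilde{\otimes}\beta''}\sqcap_{{}_{\overline{\mathfrak{S}}}}\Delta^{(j,k);\nu}_{\alpha'\widetilde{\otimes}\beta'}\}$ and reuse verbatim the computation (\ref{computationexample}): the only pure real states below $\lambda$ are the three infima of pure tensors listed, obtained by the same "$\alpha'',\beta''$ / $\alpha'',\beta^\star$ paired with $\alpha^\star,\beta'$ / $\alpha',\beta^\star$ paired with $\alpha^\star,\beta''$" pattern, now decorated in the $I$-coordinates by the common values $\delta_i:=\zeta_{(i)}(\nu)=\zeta_{(i)}(\phi)$, which is precisely what the notation $\Delta^{(j,k);\nu}_{(\cdot)\widetilde{\otimes}(\cdot)}$ encodes. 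The computation (\ref{computationexample}) shows this three-element set is already ${\mathfrak{c}}^{\overline{\mathfrak{S}}}$-closed, hence $cl_c^{\overline{\mathfrak{S}}}$-closed, so it equals $\Theta^{\overline{\mathfrak{S}}}(\lambda)$.

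\textbf{Main obstacle.} The genuinely delicate point is the bookkeeping showing that "differing in at most two coordinates" is both necessary and sufficient for a hidden $\chi$ orthogonal to $\mu$ to exist, and that the relevant closures factor through the partial traces in the way claimed — i.e. lifting the purely bipartite computation (\ref{computationexample}) to the $N$-fold tensor product uniformly in the spectator coordinates. Everything else is a mechanical translation of already-proven bipartite facts (Theorems \ref{theoremNOcompleteontic}, \ref{theorempartialtracesminimal}, \ref{theoremirreducibilitytensor} and properties (\ref{thirdcoveringpropertySbari})--(\ref{thirdcoveringpropertySbarvi})) along the slice indexed by $j,k$.
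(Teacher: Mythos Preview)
Your proposal is essentially correct and follows the same underlying strategy as the paper: reduce to the explicit $N=2$/$N=3$ computations already carried out (notably (\ref{computationexample}) and (\ref{expressionchiN=3})) and lift to general $N$ by observing that the $N-2$ ``spectator'' coordinates are frozen. The paper's proof is considerably terser: it fixes $N=3$, cites the explicit description of $\Theta^{\overline{\mathfrak{S}}}(\chi)$ from (\ref{expressionchiN=3}), and then uses (\ref{infimumJprop}), (\ref{thirdcoveringpropertySbarii}) and (\ref{developmentetildeordersimplify}) to show directly that any pure $\sigma,\kappa$ satisfying the covering conditions with $\chi$ must lie in an explicit six-element set of pure states all sharing the same first tensor factor --- from which the coordinate-agreement on $I$ is read off by inspection. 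Your route through Theorem \ref{theoremNOcompleteontic} (for the obstruction when three or more coordinates differ) and the homomorphism property of the partial traces is more discursive but lands in the same place; the paper does not need to invoke \ref{theoremNOcompleteontic} separately because that obstruction was already made explicit in the $N=3$ paragraph immediately preceding the theorem. One small slip: when you write $\chi:=\mu\sqcup_{{}_{\mathfrak{S}}}(\nu\sqcap_{{}_{\overline{\mathfrak{S}}}}\phi)$, you mean $\mu^\star$ rather than $\mu$ (as you yourself note two lines later when you say $\chi\sqsupseteq\mu^\star$); the paper's statement has the same apparent inconsistency in its definition of $\lambda$.
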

\begin{proof}
Once again we will fix $N=3$.
It suffices to remark, using the expression (\ref{expressionchiN=3}), the expression of the infimum (\ref{infimumJprop}), the property (\ref{thirdcoveringpropertySbarii}) and the expansion formula (\ref{developmentetildeordersimplify}), that the relations $(\sigma \sqcap_{{}_{{{{ \overline{\mathfrak{S}}}}}}}\chi) \sqcoversubset_{{}_{{{{ \widetilde{\mathfrak{S}}}}}}}\sigma,\chi$ and $(\kappa \sqcap_{{}_{{{{ \overline{\mathfrak{S}}}}}}}\chi) \sqcoversubset_{{}_{{{{ \widetilde{\mathfrak{S}}}}}}}\kappa,\chi$ impose necessarily $\sigma$ and $\kappa$ to be both elements of the following subset of pure states
\begin{eqnarray}
&&\hspace{-1cm}\{ (\sigma_{A_1}'\widetilde{\otimes}\sigma_{A_2}''\widetilde{\otimes}\sigma_{A_3}'')\,, \,(\sigma_{A_1}'\widetilde{\otimes}\sigma'_{A_2}\widetilde{\otimes}\sigma'_{A_3})\,,\,(\sigma_{A_1}'\widetilde{\otimes}\sigma_{A_2}^\star\widetilde{\otimes}\sigma_{A_3}'')\,,\, \nonumber\\
&&\hspace{2cm}(\sigma_{A_1}'\widetilde{\otimes}\sigma'_{A_2}\widetilde{\otimes}\sigma_{A_3}^\star)\,,\, (\sigma_{A_1}'\widetilde{\otimes}\sigma_{A_2}''\widetilde{\otimes}\sigma_{A_3}^\star)\,,\, (\sigma_{A_1}'\widetilde{\otimes}\sigma_{A_2}^\star\widetilde{\otimes}\sigma'_{A_3})\}\;\;\;\;\;\;\;\;\;\;\;
\end{eqnarray}
This concludes the proof, which can be straightforwardly generalized to generic $N$.
\end{proof}

We end this subsection by giving some basic properties that will be fruitful in the sequel.

\begin{theorem}
\begin{eqnarray}
\hspace{-0.5cm}\forall \alpha,\beta\in {{{ \overline{\mathfrak{S}}}}}{}^{{}^{pure}},&& \beta\not=\alpha \;\;\Rightarrow \;\; (\alpha\sqcap_{{}_{{{{ \overline{\mathfrak{S}}}}}}}\beta) \sqcoversubset_{{}_{{{{ \overline{\mathfrak{S}}}}}}}\alpha,\beta.\;\;\;\;\;\;\;\;\;\;\;\;\;\;\;\;\;\;\label{coveringpropertySbar}
\end{eqnarray}
\end{theorem}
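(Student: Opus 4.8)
The plan is to prove the covering property for the iterated tensor product $\overline{\mathfrak{S}}={{{ \overline{S}}}}{}_{A_1\cdots A_N}$ by induction on $N$, leveraging the fact that $\overline{\mathfrak{S}}$ is built by successive applications of the minimal tensor product $\widetilde{\otimes}$ and invoking Theorem \ref{firstcoveringlemma} at each stage. The base case $N=1$ is the one-dimensional indeterministic space of states ${\mathfrak{Z}}'_M$, whose Hasse diagram is explicitly given: all nonzero elements are atoms above $\bot$, so for any two distinct pure states $\alpha,\beta$ one has $\alpha\sqcap_{\overline{\mathfrak{S}}}\beta=\bot$ and $\bot\sqcoversubset_{\overline{\mathfrak{S}}}\alpha,\beta$ trivially. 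This establishes property (\ref{coveringpropertySbar}) for each factor ${{{ \overline{\mathfrak{S}}}}}_{A_i}$.

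First I would observe that Theorem \ref{firstcoveringlemma} is exactly the inductive step in disguise. Writing $\overline{\mathfrak{S}}={{{ \overline{\mathfrak{S}}}}}_{A_1}\widetilde{\otimes}\cdots\widetilde{\otimes}{{{ \overline{\mathfrak{S}}}}}_{A_N}$ and regrouping as $({{{ \overline{\mathfrak{S}}}}}_{A_1}\widetilde{\otimes}\cdots\widetilde{\otimes}{{{ \overline{\mathfrak{S}}}}}_{A_{N-1}})\widetilde{\otimes}{{{ \overline{\mathfrak{S}}}}}_{A_N}$, the inductive hypothesis guarantees that both $\overline{ \mathfrak{S}}_A:={{{ \overline{\mathfrak{S}}}}}_{A_1}\widetilde{\otimes}\cdots\widetilde{\otimes}{{{ \overline{\mathfrak{S}}}}}_{A_{N-1}}$ and $\overline{ \mathfrak{S}}_B:={{{ \overline{\mathfrak{S}}}}}_{A_N}$ satisfy the basic property (\ref{coveringpropertySbarpre}), namely $\beta\neq\alpha\Rightarrow(\alpha\sqcap_{\overline{ \mathfrak{S}}}\beta)\sqcoversubset_{\overline{ \mathfrak{S}}}\alpha,\beta$ for pure states. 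Theorem \ref{firstcoveringlemma} then concludes that $\overline{ \mathfrak{S}}_A\widetilde{\otimes}\overline{ \mathfrak{S}}_B={{{ \overline{S}}}}_{AB}$ inherits the same property, which is precisely the claim for $N$. Hence the whole statement follows by a clean induction, the engine being Theorem \ref{firstcoveringlemma} (itself proved as a trivial consequence of the expansion formula (\ref{developmentetildeordersimplify})).

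The only point requiring minor care is the associativity/regrouping of the iterated minimal tensor product: one must check that $({{{ \overline{\mathfrak{S}}}}}_{A_1}\widetilde{\otimes}\cdots\widetilde{\otimes}{{{ \overline{\mathfrak{S}}}}}_{A_{N-1}})\widetilde{\otimes}{{{ \overline{\mathfrak{S}}}}}_{A_N}$ is (isomorphic as a poset with star to) ${{{ \overline{S}}}}{}_{A_1\cdots A_N}$, so that the hypotheses of Theorem \ref{firstcoveringlemma} genuinely apply. This is implicit in the way the $N$-dimensional space has been introduced in this subsection, and it follows from the universal/quotient characterization of $\widetilde{\otimes}$ together with Theorem \ref{theorempuretilde} identifying pure states of a minimal tensor product with pure tensors; I would simply note that the identification of pure states is compatible with regrouping, so the expansion formula (\ref{developmentetildeordersimplify}) describes the order on ${{{ \overline{S}}}}{}_{A_1\cdots A_N}$ regardless of the bracketing chosen. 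With that remark in place, the induction closes and the proof is complete.

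\begin{proof}
We argue by induction on $N$.  For $N=1$, the space ${{{ \overline{\mathfrak{S}}}}}_{A_1}={\mathfrak{Z}}'_M$ has, by its explicit description, all its elements distinct from $\bot$ as atoms; hence for any two distinct $\alpha,\beta\in {{{ \overline{\mathfrak{S}}}}}{}_{A_1}^{{}^{pure}}$ we have $(\alpha\sqcap_{{}_{{{{ \overline{\mathfrak{S}}}}}_{A_1}}}\beta)=\bot \sqcoversubset_{{}_{{{{ \overline{\mathfrak{S}}}}}_{A_1}}}\alpha,\beta$.  This proves (\ref{coveringpropertySbar}) for $N=1$.

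Assume the property holds for iterated minimal tensor products of $N-1$ copies of one-dimensional indeterministic spaces of states, and consider $\overline{\mathfrak{S}}={{{ \overline{S}}}}{}_{A_1\cdots A_N}={{{ \overline{\mathfrak{S}}}}}_{A_1}\widetilde{\otimes}\cdots\widetilde{\otimes}{{{ \overline{\mathfrak{S}}}}}_{A_N}$.  Writing $\overline{ \mathfrak{S}}_A:={{{ \overline{\mathfrak{S}}}}}_{A_1}\widetilde{\otimes}\cdots\widetilde{\otimes}{{{ \overline{\mathfrak{S}}}}}_{A_{N-1}}$ and $\overline{ \mathfrak{S}}_B:={{{ \overline{\mathfrak{S}}}}}_{A_N}$, the poset ${{{ \overline{S}}}}{}_{A_1\cdots A_N}$ coincides with $\overline{ \mathfrak{S}}_A\widetilde{\otimes}\overline{ \mathfrak{S}}_B$, since the order on an iterated minimal tensor product is governed by the expansion formula (\ref{developmentetildeordersimplify}) independently of the bracketing, and the identification of pure states with pure tensors (Theorem \ref{theorempuretilde}) is compatible with regrouping.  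By the induction hypothesis, $\overline{ \mathfrak{S}}_A$ satisfies the basic property (\ref{coveringpropertySbarpre}); by the case $N=1$, so does $\overline{ \mathfrak{S}}_B={{{ \overline{\mathfrak{S}}}}}_{A_N}$.  Applying Theorem \ref{firstcoveringlemma} to the pair $\overline{ \mathfrak{S}}_A,\overline{ \mathfrak{S}}_B$, we conclude that ${{{ \overline{S}}}}_{AB}=\overline{ \mathfrak{S}}_A\widetilde{\otimes}\overline{ \mathfrak{S}}_B$ satisfies the same property, i.e.
\begin{eqnarray}
\forall \alpha,\beta\in {{{ \overline{\mathfrak{S}}}}}{}^{{}^{pure}},&& \beta\not=\alpha \;\;\Rightarrow \;\; (\alpha\sqcap_{{}_{{{{ \overline{\mathfrak{S}}}}}}}\beta) \sqcoversubset_{{}_{{{{ \overline{\mathfrak{S}}}}}}}\alpha,\beta.
\end{eqnarray}
This is precisely (\ref{coveringpropertySbar}) for $N$, and the induction is complete.
\end{proof}
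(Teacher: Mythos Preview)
Your proof is correct and follows exactly the approach of the paper: induction on $N$ with Theorem~\ref{firstcoveringlemma} providing the inductive step. The paper's own proof is the one-line ``trivial recursion proof (on the number $N$) using Theorem~\ref{firstcoveringlemma}''; your version simply spells out the base case and the regrouping remark that the paper leaves implicit.
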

\begin{proof}
The property (\ref{coveringpropertySbar}) is obtained by a trivial recursion proof (on the number $N$) using Theorem \ref{firstcoveringlemma}.
\end{proof}

\begin{theorem}
We have the following basic property
\begin{eqnarray}
&&\hspace{-0.5cm}\forall \lambda,\alpha,\beta,\gamma,\delta\in {{{ \overline{\mathfrak{S}}}}}{}^{{}^{pure}} \;\textit{\rm distinct},\nonumber\\
&& (\,(\alpha\sqcap_{{}_{{{{ \overline{\mathfrak{S}}}}}}}\beta)\not=(\gamma\sqcap_{{}_{{{{ \overline{\mathfrak{S}}}}}}}\delta)\;\;\;\textit{\rm and}\;\;\;(\alpha\sqcap_{{}_{{{{ \overline{\mathfrak{S}}}}}}}\beta),(\gamma\sqcap_{{}_{{{{ \overline{\mathfrak{S}}}}}}}\delta)\sqcoversubset_{{}_{\overline{\mathfrak{S}}}}\lambda \,) \;\;\Rightarrow \nonumber\\
&&\hspace{1cm} (\alpha\sqcap_{{}_{{{{ \overline{\mathfrak{S}}}}}}}\beta\sqcap_{{}_{{{{ \overline{\mathfrak{S}}}}}}}\gamma\sqcap_{{}_{{{{ \overline{\mathfrak{S}}}}}}}\delta) \sqcoversubset_{{}_{{{{ \overline{\mathfrak{S}}}}}}}(\alpha\sqcap_{{}_{{{{ \overline{\mathfrak{S}}}}}}}\beta),(\gamma\sqcap_{{}_{{{{ \overline{\mathfrak{S}}}}}}}\delta).\;\;\;\;\;\;\;\;\;\;\;\;\;\;\;\;\;\;\label{secondcoveringpropertySbar}
\end{eqnarray}
\end{theorem}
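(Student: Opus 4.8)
The plan is to establish (\ref{secondcoveringpropertySbar}) by a straightforward induction on the number $N$ of tensor factors, in exact parallel with the way (\ref{coveringpropertySbar}) was obtained from Theorem \ref{firstcoveringlemma}, but now using Theorem \ref{secondcoveringlemma} as the inductive step. For the base case $N=1$ one has ${{{ \overline{\mathfrak{S}}}}}={\mathfrak{Z}}'_{M_{A_1}}$, whose pure states are the atoms $\sigma_1,\dots,\sigma_{M_{A_1}},\sigma_1^\star,\dots,\sigma_{M_{A_1}}^\star$, and for which any two distinct pure states $\tau\not=\tau'$ satisfy $\tau\sqcap_{{}_{{{{ \overline{\mathfrak{S}}}}}}}\tau'=\bot_{{}_{{{{ \overline{\mathfrak{S}}}}}}}$. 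Hence, whenever $\alpha,\beta,\gamma,\delta\in {{{ \overline{\mathfrak{S}}}}}{}^{{}^{pure}}$ are distinct one has $(\alpha\sqcap_{{}_{{{{ \overline{\mathfrak{S}}}}}}}\beta)=\bot_{{}_{{{{ \overline{\mathfrak{S}}}}}}}=(\gamma\sqcap_{{}_{{{{ \overline{\mathfrak{S}}}}}}}\delta)$, so the premise $(\alpha\sqcap_{{}_{{{{ \overline{\mathfrak{S}}}}}}}\beta)\not=(\gamma\sqcap_{{}_{{{{ \overline{\mathfrak{S}}}}}}}\delta)$ of (\ref{secondcoveringpropertySbar}) is never met and the implication holds vacuously. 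In particular each factor ${{{ \overline{\mathfrak{S}}}}}_{A_i}={\mathfrak{Z}}'_{M_{A_i}}$ satisfies property (\ref{secondcoveringpropertySbarpre}).

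For the inductive step, assume ${{{ \overline{S}}}}_{A_1\cdots A_{N-1}}={{{ \overline{\mathfrak{S}}}}}_{A_1}\widetilde{\otimes}\cdots\widetilde{\otimes}{{{ \overline{\mathfrak{S}}}}}_{A_{N-1}}$ satisfies (\ref{secondcoveringpropertySbarpre}). By the very definition of the iterated tensor product, ${{{ \overline{S}}}}_{A_1\cdots A_N}$ decomposes as ${{{ \overline{S}}}}_{A_1\cdots A_{N-1}}\widetilde{\otimes}{{{ \overline{\mathfrak{S}}}}}_{A_N}$. Since both ${{{ \overline{S}}}}_{A_1\cdots A_{N-1}}$ (inductive hypothesis) and ${{{ \overline{\mathfrak{S}}}}}_{A_N}$ (base-case remark) satisfy (\ref{secondcoveringpropertySbarpre}), and since each of these two factors is a (possibly iterated) minimal tensor product of ${\mathfrak{Z}}'$-spaces and therefore fulfils the hypotheses needed by Theorem \ref{secondcoveringlemma}, that theorem applies with $\overline{ \mathfrak{S}}_A:={{{ \overline{S}}}}_{A_1\cdots A_{N-1}}$ and $\overline{ \mathfrak{S}}_B:={{{ \overline{\mathfrak{S}}}}}_{A_N}$ and yields that ${{{ \overline{S}}}}_{A_1\cdots A_N}$ satisfies the same property (\ref{secondcoveringpropertySbar}). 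This closes the induction and proves the theorem.

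There is essentially no genuine obstacle: the substance of the statement is already carried entirely by Theorem \ref{secondcoveringlemma}, whose proof is itself a direct consequence of the expansion formula (\ref{developmentetildeordersimplify}). The only two points deserving a line of verification are (i) the vacuity of the hypothesis in the base case, which follows from the atomic structure of ${\mathfrak{Z}}'_{M}$ (distinct pure states meet in $\bot$), and (ii) the legitimacy of regrouping the first $N-1$ factors, i.e. that ${{{ \overline{S}}}}_{A_1\cdots A_N}$ is indeed ${{{ \overline{S}}}}_{A_1\cdots A_{N-1}}\widetilde{\otimes}{{{ \overline{\mathfrak{S}}}}}_{A_N}$, which is built into the definition of the iterated product. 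Both are routine, exactly as in the recursion behind (\ref{coveringpropertySbar}).
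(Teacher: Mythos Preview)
Your proof is correct and follows exactly the approach of the paper, which simply states that ``the property (\ref{secondcoveringpropertySbar}) is obtained by a trivial recursion proof (on the number $N$) using Theorem \ref{secondcoveringlemma}.'' You have merely supplied the details the paper omits---the vacuous base case for ${\mathfrak{Z}}'_{M}$ and the explicit regrouping ${{{ \overline{S}}}}_{A_1\cdots A_N}={{{ \overline{S}}}}_{A_1\cdots A_{N-1}}\widetilde{\otimes}{{{ \overline{\mathfrak{S}}}}}_{A_N}$ for the inductive step---which is entirely in line with the intended argument.
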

\begin{proof}
The property (\ref{secondcoveringpropertySbar}) is obtained by a trivial recursion proof (on the number $N$) using Theorem \ref{secondcoveringlemma}. 
\end{proof}

\begin{theorem}\label{zetalambda}
\begin{eqnarray}
&&\hspace{-2cm}\forall \lambda,\alpha,\beta,\gamma,\delta\in {{{ \overline{\mathfrak{S}}}}}{}^{{}^{pure}} \;\textit{\rm distinct},
(\,(\alpha\sqcap_{{}_{{{{ \overline{\mathfrak{S}}}}}}}\beta)\not=(\gamma\sqcap_{{}_{{{{ \overline{\mathfrak{S}}}}}}}\delta)\;\;\;\textit{\rm and}\;\;\;(\alpha\sqcap_{{}_{{{{ \overline{\mathfrak{S}}}}}}}\beta),(\gamma\sqcap_{{}_{{{{ \overline{\mathfrak{S}}}}}}}\delta)\sqcoversubset_{{}_{\overline{\mathfrak{S}}}}\lambda \,) \;\;\Rightarrow \label{secondcoveringpropertySbarbis}\\
&&\hspace{-2cm}\exists I\subseteq\{1,\cdots,N\}, Card(I)= N-2\;\vert\; \forall i\in I, \forall \sigma,\kappa\in \{\lambda,\alpha,\beta,\gamma,\delta\},\;\;\zeta^{{}^{{\mathfrak{S}}_{A_1}\cdots{\mathfrak{S}}_{A_N}}}_{(i)}(\sigma) = \zeta^{{}^{{\mathfrak{S}}_{A_1}\cdots{\mathfrak{S}}_{A_N}}}_{(i)}(\kappa).\;\;\;\;\;\;\nonumber
\end{eqnarray}
\end{theorem}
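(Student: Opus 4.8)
The plan is to prove Theorem \ref{zetalambda} by reduction to the two-copy case established in Theorem \ref{lambdawr} (and to the bipartite covering lemma Theorem \ref{secondcoveringlemma}), using the structural decomposition of pure states of an iterated $\widetilde{\otimes}$-product as pure tensors. Recall that a pure state of $\overline{\mathfrak{S}} = {{{ \overline{S}}}}{}_{A_1\cdots A_N}$ is of the form $\sigma = \sigma_{A_1}\widetilde{\otimes}\cdots\widetilde{\otimes}\sigma_{A_N}$ with each $\sigma_{A_i}$ pure in $\overline{\mathfrak{S}}_{A_i}$ (this follows from Theorem \ref{theorempuretilde} applied iteratively), and the partial trace $\zeta^{{}^{{\mathfrak{S}}_{A_1}\cdots{\mathfrak{S}}_{A_N}}}_{(i)}$ simply extracts the $i$-th component. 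So the claim is equivalent to: if $\alpha,\beta,\gamma,\delta$ are distinct pure tensors with $(\alpha\sqcap\beta)\neq(\gamma\sqcap\delta)$ and both covered by the pure tensor $\lambda$, then the set of coordinate indices $i$ on which $\lambda,\alpha,\beta,\gamma,\delta$ all agree has size at least $N-2$.

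First I would analyze the covering relations $(\alpha\sqcap_{{}_{\overline{\mathfrak{S}}}}\beta)\sqcoversubset_{{}_{\overline{\mathfrak{S}}}}\lambda$ coordinatewise. Using the expansion formula (\ref{developmentetildeordersimplify}) (equivalently, the explicit description of the order on $\widetilde{\otimes}$-products), I would show that for pure tensors $\alpha,\beta$ the infimum $\alpha\sqcap\beta$ is the pure state whose $i$-th component is $\alpha_{A_i}$ if $\alpha_{A_i}=\beta_{A_i}$ and $\bot$ otherwise, and that $(\alpha\sqcap\beta)\sqcoversubset\lambda$ forces $\lambda$ to be a pure tensor agreeing with $\alpha$ (hence also with $\beta$) on all but exactly one coordinate — call it $j_{\alpha\beta}$ — and with $\alpha\sqcap\beta$ obtained from $\lambda$ by deleting that one coordinate. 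This is the iterated analogue of the $N=1$ statement $(\sigma_{A_i}\sqcap\sigma'_{A_i})\sqcoversubset\lambda_{A_i}$. Concretely: $\alpha$ and $\lambda$ agree except at coordinate $j_{\alpha\beta}$ where $\lambda_{A_{j_{\alpha\beta}}}$ covers $\alpha_{A_{j_{\alpha\beta}}}\sqcap\beta_{A_{j_{\alpha\beta}}}$; similarly $\gamma,\delta,\lambda$ agree except at a coordinate $j_{\gamma\delta}$. The key observation is that $(\alpha\sqcap\beta)\neq(\gamma\sqcap\delta)$ combined with both being $\sqcoversubset\lambda$ forces $j_{\alpha\beta}\neq j_{\gamma\delta}$: if they coincided, both infima would be $\lambda$ with that single coordinate deleted, hence equal, contradicting distinctness.

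Once $j_{\alpha\beta}\neq j_{\gamma\delta}$ is established, set $I := \{1,\dots,N\}\smallsetminus\{j_{\alpha\beta},j_{\gamma\delta}\}$, which has cardinality $N-2$. For any $i\in I$: $\alpha$ agrees with $\lambda$ at $i$ (since $i\neq j_{\alpha\beta}$), $\beta$ agrees with $\lambda$ at $i$ (same reason, as $\beta$ agrees with $\alpha$ away from $j_{\alpha\beta}$), $\gamma$ agrees with $\lambda$ at $i$ (since $i\neq j_{\gamma\delta}$), and $\delta$ agrees with $\lambda$ at $i$. Hence $\zeta^{{}^{{\mathfrak{S}}_{A_1}\cdots{\mathfrak{S}}_{A_N}}}_{(i)}(\sigma)$ takes the common value $\lambda_{A_i}$ for all $\sigma\in\{\lambda,\alpha,\beta,\gamma,\delta\}$, which is precisely the conclusion. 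The implementation would proceed by induction on $N$, invoking Theorem \ref{secondcoveringlemma} at each step to handle one more tensor factor, exactly as the proofs of (\ref{coveringpropertySbar}) and (\ref{secondcoveringpropertySbar}) do; the base case $N\leq 2$ is where the real content lives and can be read off the bipartite analysis and (\ref{computationexample}).

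The main obstacle I anticipate is pinning down rigorously that a covering $(\alpha\sqcap\beta)\sqcoversubset\lambda$ among pure tensors is witnessed at \emph{exactly one} coordinate — ruling out the degenerate possibilities that $\alpha\sqcap\beta$ differs from $\lambda$ at zero coordinates (impossible since they are comparable and distinct) or at two or more coordinates (which would violate covering, because one could interpolate a strictly intermediate element by undeleting only one of those coordinates). This requires carefully using that each $\overline{\mathfrak{S}}_{A_i}$ has the covering property $(\sigma\sqcap\sigma')\sqcoversubset\sigma,\sigma'$ for distinct pure states — which holds for $\mathfrak{Z}'_{M_{A_i}}$ and propagates via Theorem \ref{firstcoveringlemma} — together with the bimorphic/expansion identities, to verify that the "undelete one coordinate" construction genuinely produces strictly intermediate states. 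Everything else is bookkeeping on indices.
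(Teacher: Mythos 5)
Your route is essentially the paper's: the paper also reads each hypothesis $(\alpha\sqcap_{{}_{\overline{\mathfrak{S}}}}\beta)\sqcoversubset_{{}_{\overline{\mathfrak{S}}}}\lambda$ through the expansion formula (\ref{developmentetildeordersimplify}) to conclude that the pure tensors $\alpha,\beta,\lambda$ agree componentwise except at a single coordinate (where the component infimum is covered by $\lambda$'s component), does the same for $(\gamma,\delta)$, and then obtains $I$ by discarding the at most two exceptional coordinates.

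One assertion in your write-up is, however, false as stated and must be repaired: for arbitrary pure tensors $\alpha,\beta$, the infimum $\alpha\sqcap_{{}_{\overline{\mathfrak{S}}}}\beta$ is \emph{not} the pure tensor of componentwise infima. For $N=2$, the state $\sigma_1\widetilde{\otimes}\tau_1\sqcap_{{}_{\overline{\mathfrak{S}}}}\sigma_1^\star\widetilde{\otimes}\tau_1^\star$ is distinguished from $\bot\widetilde{\otimes}\bot$ by the pair of effects $\mathfrak{l}_{(\sigma_1,\sigma_1^\star)}$, $\mathfrak{l}_{(\tau_1^\star,\tau_1)}$ (the evaluation is $\textit{\bf N}$, not $\bot$), and indeed the entire construction (Theorem \ref{theoremaxiomA1bipartitepre}, the description of the atoms, the Bell-non-locality section) rests on this failure of componentwise collapse. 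The componentwise identity is valid only when $\alpha$ and $\beta$ agree in all but one factor, where it follows from the bimorphic properties (\ref{pitensor=tensorpi1preminimal})--(\ref{pitensor=tensorpi2preminimal}); fortunately that is the only case you need \emph{after} the covering analysis, so the fix is to derive the ``agree on all but one coordinate'' structure directly from the order expansion (\ref{developmentetildeordersimplify}), as the paper does, and only then use the restricted componentwise formula. Note also that your step $j_{\alpha\beta}\neq j_{\gamma\delta}$ (correct here, since coinciding exceptional coordinates would force the two infima to coincide) is not actually needed: the statement only asks for some $I$ with $Card(I)=N-2$ on which all five states agree, and they agree on the complement of the at most two exceptional coordinates in every case, which is how the paper concludes without using $(\alpha\sqcap_{{}_{\overline{\mathfrak{S}}}}\beta)\neq(\gamma\sqcap_{{}_{\overline{\mathfrak{S}}}}\delta)$ at that point.
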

\begin{proof}
Direct consequence of the expansion formula (\ref{developmentetildeordersimplify}).  Let us consider $\lambda:=\sigma^{\lambda}_{A_1}\widetilde{\otimes}\cdots\widetilde{\otimes}\sigma^{\lambda}_{A_N}$, $\alpha:=\sigma^{\alpha}_{A_1}\widetilde{\otimes}\cdots\widetilde{\otimes}\sigma^{\alpha}_{A_N}$, $\beta:=\sigma^{\beta}_{A_1}\widetilde{\otimes}\cdots\widetilde{\otimes}\sigma^{\beta}_{A_N}$. The property $(\alpha\sqcap_{{}_{{{{ \overline{\mathfrak{S}}}}}}}\beta)\sqcoversubset_{{}_{\overline{\mathfrak{S}}}}\lambda$ imposes 
\begin{eqnarray}
\hspace{-1cm}\exists i\in \{1,\cdots,N\}, \;\vert\; (\,\forall j\in \{1,\cdots,N\}\smallsetminus\{i\}, \;\;\sigma^{\alpha}_{A_j}=\sigma^{\beta}_{A_j}=\sigma^{\lambda}_{A_j}
\;\;\textit{\rm and}\;\;(\sigma^{\alpha}_{A_i}\sqcap_{{}_{{{{ \overline{\mathfrak{S}}_{A_i}}}}}}\!\!\sigma^{\beta}_{A_i})\sqcoversubset_{{}_{{{{ \overline{\mathfrak{S}}_{A_i}}}}}}\sigma^{\lambda}_{A_i}\,).\;\;\;\;\;\;
\end{eqnarray}
This leads directly to the announced result.
\end{proof}

\vspace{0.3cm}

\subsection{Towards Hilbert geometry}\label{subsectionhilbertgeometry}

Let us now focus on projectivity and orthogonality properties relative to the multidimensional linear indeterministic space of states.\\

\begin{definition}
We will define the following subset of  ${ \mathfrak{S}}$ :
\begin{eqnarray}
\hspace{-1.5cm}{\check{\mathfrak{S}}} &:= & \overline{ \mathfrak{S}}\cup \left\{\chi\in { \mathfrak{S}}\smallsetminus \overline{ \mathfrak{S}}\;\vert\; \exists \mu,\nu,\phi\in \overline{ \mathfrak{S}}{}^{{}^{pure}},\;\;\;\;\gamma:=
\nu\sqcap_{{}_{\overline{ \mathfrak{S}}}}\phi
\sqcoversubset_{{}_{\overline{ \mathfrak{S}}}}\nu,\phi,\;\;\;\;\gamma\not\sqsupseteq_{{}_{\overline{ \mathfrak{S}}}}\mu^\star,\;\;\;\; \chi=\mu^\star\sqcup_{{}_{{ \mathfrak{S}}}}\gamma \right\}\label{defSbarhat}\;\;\;\;\;\;\;\;\;\;\;\;\;\\
%\hspace{-1.5cm}
\hspace{-1.5cm}{\widecheck{\mathfrak{S}}} &:= &  \overline{ \mathfrak{S}}\cup \left\{\chi\in { \mathfrak{S}}\smallsetminus \overline{ \mathfrak{S}}\;\vert\; \exists \mu,\phi\in \overline{ \mathfrak{S}}{}^{{}^{pure}},\;\;\mu\sqcap_{{}_{\overline{ \mathfrak{S}}}}\phi
\sqcoversubset_{{}_{\overline{ \mathfrak{S}}}}\mu,\phi,\;\;\;\;\; \chi=\mu^\star\sqcup_{{}_{{ \mathfrak{S}}}}(\mu\sqcap_{{}_{\overline{ \mathfrak{S}}}}\phi) \right\}\label{defSwidecheck}\;\;\;\;\;\;\;\;\;\;\;\;\;
\end{eqnarray}
and 
\begin{eqnarray}
{\mathfrak{G}}^{\check{\mathfrak{S}}} &:=&\overline{ \mathfrak{S}}{}^{{}^{pure}} \cup ({\check{\mathfrak{S}}}\smallsetminus \overline{ \mathfrak{S}}),\\
{\mathfrak{G}}^{\widecheck{\mathfrak{S}}} &:=&\overline{ \mathfrak{S}}{}^{{}^{pure}} \cup ({\widecheck{\mathfrak{S}}}\smallsetminus \overline{ \mathfrak{S}}).
\end{eqnarray}
We note obviously that ${\check{\mathfrak{S}}}\supseteq {\widecheck{\mathfrak{S}}}$ and ${\mathfrak{G}}^{\check{\mathfrak{S}}}\supseteq {\mathfrak{G}}^{\widecheck{\mathfrak{S}}}$.
\end{definition}

\begin{definition}
On ${\mathfrak{G}}^{\check{\mathfrak{S}}}$, we will define the {\em consistency relation}, denoted $\asymp$% and already defined on $\overline{ \mathfrak{S}}{}^{{}^{pure}}$
, as follows (see Definition \ref{definwr} for a definition of $\wr$):
\begin{eqnarray}
&&\hspace{-1cm}\forall \chi,\chi'\in {\check{\mathfrak{S}}}\smallsetminus \overline{ \mathfrak{S}},\;\;\;\;\chi \asymp \chi'\;\;:\Leftrightarrow\;\; (\,\chi=\chi'\;\;\textit{\rm or}\;\;\exists \epsilon\in \Theta^{\overline{ \mathfrak{S}}}(\chi)\cap \Theta^{\overline{ \mathfrak{S}}}(\chi'),\;\;\epsilon=(\chi\sqcap_{{}_{{\mathfrak{S}}}}\chi')\,)\;\;\;\;\;\;\;\;\;\;\;\;\\
&&\hspace{-1cm}\forall \chi\in {\check{\mathfrak{S}}}\smallsetminus \overline{ \mathfrak{S}},\forall \sigma\in \overline{ \mathfrak{S}}{}^{{}^{pure}},\;\;\;\;\chi \asymp \sigma\;\;\Leftrightarrow\;\; \sigma \asymp \chi \;\;:\Leftrightarrow\;\; (\,\exists \epsilon\in\Theta^{\overline{ \mathfrak{S}}}(\chi)\;\vert\; \epsilon \sqcoversubset_{{}_{\overline{ \mathfrak{S}}}}\sigma\,)\\
&&\hspace{-1cm}\forall \sigma,\sigma'\in \overline{ \mathfrak{S}}{}^{{}^{pure}},\;\;\;\;\sigma \asymp \sigma'\;\;:\Leftrightarrow\;\; \sigma \wr \sigma'%(\,(\sigma\sqcap_{{}_{\overline{\mathfrak{S}}}}\sigma') \sqcoversubset_{{}_{\overline{  \mathfrak{S}}}}\sigma,\sigma'\;\textit{\rm or }\; \sigma=\sigma'\,)\;\;\Leftrightarrow\;\;\textit{\rm TRUE}.
\end{eqnarray}
Note that, due to the property (\ref{coveringpropertySbar}) we have $\forall \sigma,\sigma'\in \overline{ \mathfrak{S}}{}^{{}^{pure}},(\sigma\sqcap_{{}_{\overline{\mathfrak{S}}}}\sigma') \sqcoversubset_{{}_{\overline{  \mathfrak{S}}}}\sigma,\sigma'\;\textit{\rm or }\; \sigma=\sigma'$.
\end{definition}

\begin{remark}
Note that, using (\ref{thirdcoveringpropertySbarii}), we obtain easily 
\begin{eqnarray}
\hspace{-1.2cm}\forall \chi,\chi'\in {\check{\mathfrak{S}}}\smallsetminus \overline{ \mathfrak{S}},&&(\,\chi \asymp \chi'\;\textit{\rm and}\;\chi \not= \chi'\,)\;\;\Rightarrow\;\; Card(\Theta^{\overline{ \mathfrak{S}}}(\chi)\cap \Theta^{\overline{ \mathfrak{S}}}(\chi'))=1.\label{Rek1}
\end{eqnarray}
\end{remark}

\begin{remark}
Note that, using (\ref{thirdcoveringpropertySbarii}), we obtain easily 
\begin{eqnarray}
\hspace{-1.2cm}\forall \chi\in {\check{\mathfrak{S}}}\smallsetminus \overline{ \mathfrak{S}},\forall \sigma\in \overline{\mathfrak{S}}{}^{{}^{pure}},&& 
(\,\chi \asymp \sigma\,)\;\;\Rightarrow\;\; Card(\Theta^{\overline{ \mathfrak{S}}}(\chi)\cap (\downarrow_{{}_{\overline{\mathfrak{S}}}}\sigma))=1.\label{Rek2}
\end{eqnarray}
\end{remark}

\begin{definition}
A {\em consistent subset of states} is defined to be a subset of ${\mathfrak{G}}^{\check{\mathfrak{S}}}$ such that all its elements are pairwise consistent. 
\end{definition}

\begin{definition}
The {\em consistency cover} of ${\mathfrak{G}}^{\check{\mathfrak{S}}}$, denoted ${ \mathfrak{E}}^{\check{\mathfrak{S}}}$, is defined as a family of maximal consistent subsets covering the whole set of elements of ${\mathfrak{G}}^{\check{\mathfrak{S}}}$ :
\begin{eqnarray}\left\{
\begin{array}{l}
(\bigcup_{{}_{U\in { \mathfrak{E}}^{\check{\mathfrak{S}}}}} U) = {\mathfrak{G}}^{\check{\mathfrak{S}}}\\
\forall U,U’ \in { \mathfrak{E}}^{\check{\mathfrak{S}}},\;\;\; U \subseteq U’ \;\Rightarrow \; U =U’\\
\forall U\in { \mathfrak{E}}^{\check{\mathfrak{S}}},\;\forall \sigma,\sigma'\in U,\;\sigma\asymp \sigma'.
\end{array}
\right.
\end{eqnarray}
\end{definition}

\begin{definition}
Let us consider any consistent subset $U\in { \mathfrak{E}}^{\check{\mathfrak{S}}}$. \\
We define a ternary relation ${\mathfrak{r}}^U$ on $U$ (it will be called {\em colinearity}) as follows
\begin{eqnarray}
\forall \sigma_1,\sigma_2,\sigma_3\in U,&&{\mathfrak{r}}^U(\sigma_1,\sigma_2,\sigma_3)\;\;:\Leftrightarrow\;\; (\, \sigma_2=\sigma_3\;\textit{\rm or}\; (\sigma_2 \sqcap_{{}_{{\mathfrak{S}}}}\sigma_3)\sqcoversubset_{{}_{{\mathfrak{S}}}}\sigma_1\,).\;\;\;\;\;\;\;\;\;\;\;
\end{eqnarray}
\end{definition}

\begin{lemma}\label{lemmaproj0}
We note the following sheaf condition
\begin{eqnarray}
\hspace{-1cm}\forall \sigma_1,\sigma_2,\sigma_3\in { \mathfrak{G}}^{\check{\mathfrak{S}}},\forall U,U'\in { \mathfrak{E}}^{\check{\mathfrak{S}}},&& \sigma_1,\sigma_2,\sigma_3\in U\cap U'\;\Rightarrow \; {\mathfrak{r}}^U(\sigma_1,\sigma_2,\sigma_3)={\mathfrak{r}}^{U'}(\sigma_1,\sigma_2,\sigma_3).\label{VeblenYoung0}\;\;\;\;\;\;\;\;\;\;\;\;\;\;
\end{eqnarray}
\end{lemma}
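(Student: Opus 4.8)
The plan is to observe that the ternary relation ${\mathfrak{r}}^U$ is, in fact, defined by a condition that does not refer to $U$ at all: for $\sigma_1,\sigma_2,\sigma_3\in U$ we have ${\mathfrak{r}}^U(\sigma_1,\sigma_2,\sigma_3)$ iff $\sigma_2=\sigma_3$ or $(\sigma_2\sqcap_{{}_{\mathfrak{S}}}\sigma_3)\sqcoversubset_{{}_{\mathfrak{S}}}\sigma_1$, and the meet $\sigma_2\sqcap_{{}_{\mathfrak{S}}}\sigma_3$ as well as the covering relation $\sqcoversubset_{{}_{\mathfrak{S}}}$ are intrinsic data of the space of states ${\mathfrak{S}}$, independent of which consistent subset we look at. Thus the only thing that needs checking is that the right-hand side \emph{makes sense} and gives the same truth value whenever $\sigma_1,\sigma_2,\sigma_3$ all lie in a common consistent subset $U\in{ \mathfrak{E}}^{\check{\mathfrak{S}}}$, and hence that if they lie in two such subsets $U$ and $U'$ the value is the same.

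Concretely, I would proceed as follows. First, fix $\sigma_1,\sigma_2,\sigma_3\in {\mathfrak{G}}^{\check{\mathfrak{S}}}$ and $U,U'\in { \mathfrak{E}}^{\check{\mathfrak{S}}}$ with $\sigma_1,\sigma_2,\sigma_3\in U\cap U'$. Unfolding the definition of ${\mathfrak{r}}^U$ on $U$ and of ${\mathfrak{r}}^{U'}$ on $U'$, in both cases the statement ${\mathfrak{r}}(\sigma_1,\sigma_2,\sigma_3)$ reads: ``$\sigma_2=\sigma_3$, or $(\sigma_2 \sqcap_{{}_{{\mathfrak{S}}}}\sigma_3)\sqcoversubset_{{}_{{\mathfrak{S}}}}\sigma_1$.'' Since $\sigma_2 \sqcap_{{}_{{\mathfrak{S}}}}\sigma_3$ is computed in the down-complete Inf semi-lattice ${\mathfrak{S}}$ and the covering relation $\sqcoversubset_{{}_{{\mathfrak{S}}}}$ is the one from Definition \ref{defcoversubset}, neither quantity depends on the ambient consistent set; the only role of $U$ (resp.\ $U'$) is to guarantee that $\sigma_1,\sigma_2,\sigma_3$ are pairwise consistent so that the relation is actually being evaluated on a legitimate triple. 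Hence ${\mathfrak{r}}^U(\sigma_1,\sigma_2,\sigma_3)$ and ${\mathfrak{r}}^{U'}(\sigma_1,\sigma_2,\sigma_3)$ are literally the same Boolean expression, and therefore equal. This yields (\ref{VeblenYoung0}).

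The only subtlety worth addressing explicitly is well-definedness: one must be sure that the defining formula of ${\mathfrak{r}}^U$ is phrased purely in terms of $\sqcap_{{}_{\mathfrak{S}}}$, $\sqcoversubset_{{}_{\mathfrak{S}}}$, and equality in ${\mathfrak{S}}$, and does not, for instance, implicitly quantify over elements of $U$. Reading the definition, it does not, so the argument goes through immediately. A secondary point one could remark on, but which is not needed for (\ref{VeblenYoung0}) as stated, is that if $\sigma_1,\sigma_2,\sigma_3$ happen to lie in the same consistent set then the pair consistency relations coming from $U$ and from $U'$ coincide as well (both just record pairwise $\asymp$ among the three elements, which is a property of the pairs themselves via $\wr$, $\Theta^{\overline{\mathfrak{S}}}$, and the meet/covering structure of ${\mathfrak{S}}$); this is consistent with, and reinforces, the conclusion.

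I do not expect any real obstacle here: the statement is labelled a Lemma and flagged as ``the following sheaf condition,'' and the content is essentially the tautology that a relation defined by an intrinsic formula restricts compatibly. The one place to be careful is simply to spell out that $\sqcap_{{}_{\mathfrak{S}}}$ and $\sqcoversubset_{{}_{\mathfrak{S}}}$ refer to the global structure of ${\mathfrak{S}}$ and not to any structure internal to the consistent subsets, so that the reader sees there is genuinely nothing to compute.
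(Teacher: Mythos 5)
Your proof is correct and matches the paper's (implicit) reasoning: the paper states this lemma without proof precisely because the defining formula of ${\mathfrak{r}}^U$ — ``$\sigma_2=\sigma_3$ or $(\sigma_2\sqcap_{{}_{\mathfrak{S}}}\sigma_3)\sqcoversubset_{{}_{\mathfrak{S}}}\sigma_1$'' — makes no reference to $U$, so the sheaf condition is immediate, exactly as you argue.
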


Let us begin by checking a basic symmetry result of the colinearity relation.

\begin{lemma}
The ternary relation ${\mathfrak{r}}^U$ defined on the consistent subset $U$ is symmetric under any permutation of its arguments.
\end{lemma}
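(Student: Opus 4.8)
The statement to prove is that the ternary colinearity relation $\mathfrak{r}^U$ on a consistent subset $U\in\mathfrak{E}^{\check{\mathfrak{S}}}$ is symmetric under any permutation of its arguments. Recall the definition: $\mathfrak{r}^U(\sigma_1,\sigma_2,\sigma_3)$ holds iff $\sigma_2=\sigma_3$ or $(\sigma_2\sqcap_{{}_{\mathfrak{S}}}\sigma_3)\sqcoversubset_{{}_{\mathfrak{S}}}\sigma_1$. The only non-trivial symmetry is the transposition swapping $\sigma_1$ with $\sigma_2$ (equivalently $\sigma_1$ with $\sigma_3$); symmetry in the pair $(\sigma_2,\sigma_3)$ is immediate from the definition, since both the clause $\sigma_2=\sigma_3$ and the clause involving $\sigma_2\sqcap_{{}_{\mathfrak{S}}}\sigma_3$ are manifestly symmetric in $\sigma_2,\sigma_3$. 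So the whole content is: show $\mathfrak{r}^U(\sigma_1,\sigma_2,\sigma_3)\Leftrightarrow\mathfrak{r}^U(\sigma_2,\sigma_1,\sigma_3)$.

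The plan is to reduce everything to the structure of a "line" in a consistent subset. First I would dispose of the degenerate cases: if any two of $\sigma_1,\sigma_2,\sigma_3$ coincide, both $\mathfrak{r}^U(\sigma_1,\sigma_2,\sigma_3)$ and its permutations reduce to easily-checked conditions (e.g. if $\sigma_1=\sigma_2$ then $\mathfrak{r}^U(\sigma_1,\sigma_2,\sigma_3)$ says $\sigma_1=\sigma_3$ or $(\sigma_1\sqcap_{{}_{\mathfrak{S}}}\sigma_3)\sqcoversubset_{{}_{\mathfrak{S}}}\sigma_1$, the latter being impossible unless $\sigma_1=\sigma_3$ since $\sigma_1\sqcap_{{}_{\mathfrak{S}}}\sigma_3\sqsubseteq_{{}_{\mathfrak{S}}}\sigma_1$; so the relation with a repeated argument forces all three equal, which is symmetric). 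The heart of the matter is the case of three pairwise-distinct consistent elements. Here I would use the covering properties established earlier — principally the property (\ref{coveringpropertySbar}) (every meet of two distinct pure states is covered by each), Theorem \ref{secondcoveringlemma}/property (\ref{secondcoveringpropertySbar}) (the "transitivity of covering meets" statement), and the remark (\ref{Rek1})/(\ref{Rek2}) on cardinalities of intersections of the $\Theta^{\overline{\mathfrak{S}}}$ sets — to show that if $(\sigma_2\sqcap_{{}_{\mathfrak{S}}}\sigma_3)\sqcoversubset_{{}_{\mathfrak{S}}}\sigma_1$ with $\sigma_1,\sigma_2,\sigma_3$ distinct and consistent, then in fact $\sigma_1\sqcap_{{}_{\mathfrak{S}}}\sigma_2\sqcap_{{}_{\mathfrak{S}}}\sigma_3 = \sigma_1\sqcap_{{}_{\mathfrak{S}}}\sigma_2 = \sigma_1\sqcap_{{}_{\mathfrak{S}}}\sigma_3 = \sigma_2\sqcap_{{}_{\mathfrak{S}}}\sigma_3$, i.e. the three pairwise meets coincide with a single common "point on the line". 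Once that single common meet, call it $\epsilon$, is identified, covering $\sigma_1$ (which is what we assumed) must be equivalent to covering $\sigma_2$ and covering $\sigma_3$: since $\epsilon = \sigma_1\sqcap_{{}_{\mathfrak{S}}}\sigma_3 \sqcoversubset_{{}_{\mathfrak{S}}}\sigma_1$ by hypothesis and one also shows $\epsilon\sqcoversubset_{{}_{\mathfrak{S}}}\sigma_2$ by the same covering-meet arithmetic, the relation $(\sigma_1\sqcap_{{}_{\mathfrak{S}}}\sigma_3)\sqcoversubset_{{}_{\mathfrak{S}}}\sigma_2$ holds, which is exactly $\mathfrak{r}^U(\sigma_2,\sigma_1,\sigma_3)$.

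Concretely, the key chain of implications I would write out is: assume $\sigma_1,\sigma_2,\sigma_3\in U$ distinct and $(\sigma_2\sqcap_{{}_{\mathfrak{S}}}\sigma_3)\sqcoversubset_{{}_{\mathfrak{S}}}\sigma_1$. Set $\epsilon:=\sigma_2\sqcap_{{}_{\mathfrak{S}}}\sigma_3$. By consistency of the pair $\{\sigma_1,\sigma_2\}$ together with $\epsilon\sqsubseteq_{{}_{\mathfrak{S}}}\sigma_1$ (which gives $\epsilon\sqcap_{{}_{\mathfrak{S}}}\sigma_2 = \epsilon$, i.e. $\sigma_1\sqcap_{{}_{\mathfrak{S}}}\sigma_2\sqsupseteq_{{}_{\mathfrak{S}}}\epsilon$), and the covering property (\ref{coveringpropertySbar}) applied to the distinct states $\sigma_1,\sigma_2$ (so $\sigma_1\sqcap_{{}_{\mathfrak{S}}}\sigma_2\sqcoversubset_{{}_{\mathfrak{S}}}\sigma_1$), the interval between $\epsilon$ and $\sigma_1$ has height one, so $\sigma_1\sqcap_{{}_{\mathfrak{S}}}\sigma_2 = \epsilon$; similarly $\sigma_1\sqcap_{{}_{\mathfrak{S}}}\sigma_3 = \epsilon$. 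Then $\epsilon = \sigma_1\sqcap_{{}_{\mathfrak{S}}}\sigma_3 \sqcoversubset_{{}_{\mathfrak{S}}}\sigma_2$ follows again from (\ref{coveringpropertySbar}) and $\sigma_1\sqcap_{{}_{\mathfrak{S}}}\sigma_3 = \epsilon = \sigma_2\sqcap_{{}_{\mathfrak{S}}}\sigma_3\sqsubseteq_{{}_{\mathfrak{S}}}\sigma_2$ combined with $\sigma_2\sqcap_{{}_{\mathfrak{S}}}\sigma_3\sqcoversubset_{{}_{\mathfrak{S}}}\sigma_2$. Hence $\mathfrak{r}^U(\sigma_2,\sigma_1,\sigma_3)$ holds. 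By the already-noted symmetry in the last two slots, this gives full symmetry under the transposition $(1\,2)$, and transpositions $(1\,2)$ and $(2\,3)$ generate the full symmetric group $S_3$. The main obstacle — and the step I would take most care with — is handling the elements of $\check{\mathfrak{S}}\smallsetminus\overline{\mathfrak{S}}$ (the "hidden" consistent states), where "$\sqcap_{{}_{\mathfrak{S}}}$" and "$\sqcoversubset_{{}_{\mathfrak{S}}}$" must be computed via $\Theta^{\overline{\mathfrak{S}}}$ and the detailed covering structure of Theorem \ref{lambdawr} and property (\ref{thirdcoveringpropertySbarii}); one needs the remarks (\ref{Rek1}), (\ref{Rek2}) to guarantee that pairwise meets of consistent states behave like "points" and do not branch, so that the same height-one interval argument goes through uniformly. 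I expect that the bulk of the written proof will consist of verifying that the covering-meet arithmetic used above is valid in $\mathfrak{S}$ (not just in $\overline{\mathfrak{S}}$), invoking the relevant earlier lemmas case by case according to whether each $\sigma_i$ lies in $\overline{\mathfrak{S}}{}^{pure}$ or in $\check{\mathfrak{S}}\smallsetminus\overline{\mathfrak{S}}$.
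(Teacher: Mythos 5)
Your main-case argument is essentially the paper's own: reduce to the transposition of the first two slots, set $\epsilon:=\sigma_2\sqcap_{{}_{\mathfrak{S}}}\sigma_3$, use the hypothesis $\epsilon\sqcoversubset_{{}_{\mathfrak{S}}}\sigma_1$ together with $\epsilon\sqsubseteq_{{}_{\mathfrak{S}}}\sigma_1\sqcap_{{}_{\mathfrak{S}}}\sigma_3\sqsubseteq_{{}_{\mathfrak{S}}}\sigma_1$ to identify $\sigma_1\sqcap_{{}_{\mathfrak{S}}}\sigma_3=\epsilon$, and then conclude $\epsilon\sqcoversubset_{{}_{\mathfrak{S}}}\sigma_2$ from the covering structure — (\ref{coveringpropertySbar}) when the relevant elements are pure, and $\epsilon\in\Theta^{\overline{\mathfrak{S}}}(\sigma_2)$ with (\ref{thirdcoveringpropertySbarii}), via (\ref{Rek1})/(\ref{Rek2}), when $\sigma_2$ is hidden. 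The case-by-case verification you defer (pure versus elements of $\check{\mathfrak{S}}\smallsetminus\overline{\mathfrak{S}}$) is exactly what the paper does as well, so on this part there is no divergence of method.

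There is, however, a concrete error in your treatment of the degenerate cases. You claim that when $\sigma_1=\sigma_2$ the clause $(\sigma_1\sqcap_{{}_{\mathfrak{S}}}\sigma_3)\sqcoversubset_{{}_{\mathfrak{S}}}\sigma_1$ is ``impossible unless $\sigma_1=\sigma_3$ since $\sigma_1\sqcap_{{}_{\mathfrak{S}}}\sigma_3\sqsubseteq_{{}_{\mathfrak{S}}}\sigma_1$''. That reasoning is wrong: being below $\sigma_1$ is precisely what a covering requires, and for two distinct consistent elements the meet \emph{is} generically covered by each of them — this is (\ref{coveringpropertySbar}) for pure states and the analogous statements for hidden states; it is the same fact you rely on in your main case. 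Worse, if your claim were true the lemma itself would fail: $\mathfrak{r}^U(\sigma_3,\sigma_1,\sigma_1)$ always holds by the first clause, so symmetry forces $\mathfrak{r}^U(\sigma_1,\sigma_1,\sigma_3)$ to hold as well, i.e.\ forces $(\sigma_1\sqcap_{{}_{\mathfrak{S}}}\sigma_3)\sqcoversubset_{{}_{\mathfrak{S}}}\sigma_1$ for distinct consistent $\sigma_1,\sigma_3$. Note also that the repeated-argument case that actually needs attention for the $(1\,2)$-swap is not $\sigma_1=\sigma_2$ (there the swap is the identity) but $\sigma_2=\sigma_3$: one must check that $\mathfrak{r}^U(\sigma_2,\sigma_1,\sigma_2)$ holds whenever $\mathfrak{r}^U(\sigma_1,\sigma_2,\sigma_2)$ does, which again comes down to $(\sigma_1\sqcap_{{}_{\mathfrak{S}}}\sigma_2)\sqcoversubset_{{}_{\mathfrak{S}}}\sigma_2$ for distinct consistent elements — this is how the paper disposes of that case. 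So the fix is available with the tools you already invoke, but the degenerate-case paragraph as written asserts a false statement and must be replaced.
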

\begin{proof}
It suffices to prove that it is symmetric under the exchange of the two first arguments (because it is already symmetric under the exchange of its two last arguments).  This is a simple computation. We reproduce only one of its cases. \\
Let us then consider $\sigma_1,\sigma_2,\sigma_3\in {\mathfrak{G}}^{\check{\mathfrak{S}}}$ such that $\sigma_1,\sigma_2,\sigma_3$ are pairwise compatible.  We will choose for example $\sigma_2,\sigma_3\in {\check{\mathfrak{S}}}\smallsetminus \overline{ \mathfrak{S}}$. \\ 
Let us first consider that $\sigma_2\not=\sigma_3$.  Then, ${\mathfrak{r}}^U(\sigma_1,\sigma_2,\sigma_3)$ is equivalent to $\sigma_1\sqcoversupset_{{}_{{\mathfrak{S}}}}(\sigma_2 \sqcap_{{}_{{\mathfrak{S}}}}\sigma_3)$, which means that there exists $\epsilon \sqcoversubset_{{}_{{\mathfrak{S}}}}\sigma_1$ where $\{\epsilon\}= \Theta^{\overline{ \mathfrak{S}}}(\sigma_2)\cap \Theta^{\overline{ \mathfrak{S}}}(\sigma_3)$ (here we use property (\ref{Rek1}) or (\ref{Rek2})). We then deduce that $(\sigma_1 \sqcap_{{}_{{\mathfrak{S}}}}\sigma_3)=\epsilon$ and using $\epsilon\in \Theta^{\overline{ \mathfrak{S}}}(\sigma_2)$ we obtain $\sigma_2\sqcoversupset_{{}_{{\mathfrak{S}}}}(\sigma_1 \sqcap_{{}_{{\mathfrak{S}}}}\sigma_3)$ from (\ref{thirdcoveringpropertySbarii}). This proves ${\mathfrak{r}}^U(\sigma_2,\sigma_1,\sigma_3)$. \\
Let us now consider that $\sigma_2=\sigma_3$. Then, we have immediately ${\mathfrak{r}}^U(\sigma_1,\sigma_2,\sigma_3)$.  We have also trivially $\sigma_2\sqcoversupset_{{}_{{\mathfrak{S}}}}(\sigma_1 \sqcap_{{}_{{\mathfrak{S}}}}\sigma_2)$ which means that we have also ${\mathfrak{r}}^U(\sigma_2,\sigma_1,\sigma_3)$. 
\end{proof}

\begin{lemma}{\bf [Veblen-Young's first geometric property]}\label{lemmaproj1}
\begin{eqnarray}
\forall U\in { \mathfrak{E}}^{\check{\mathfrak{S}}}, \forall \sigma_1,\sigma_2\in U,&&{\mathfrak{r}}^U(\sigma_1,\sigma_2,\sigma_2).\label{VeblenYoung3}
\end{eqnarray}
\end{lemma}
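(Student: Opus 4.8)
The claim \eqref{VeblenYoung3} is the degenerate case of colinearity where the last two arguments coincide, so the plan is simply to unwind the definition of ${\mathfrak{r}}^U$ in this special case. Recall that for $\sigma_1,\sigma_2,\sigma_3\in U$ we have ${\mathfrak{r}}^U(\sigma_1,\sigma_2,\sigma_3) :\Leftrightarrow (\sigma_2=\sigma_3 \text{ or } (\sigma_2\sqcap_{{}_{{\mathfrak{S}}}}\sigma_3)\sqcoversubset_{{}_{{\mathfrak{S}}}}\sigma_1)$. Taking $\sigma_3:=\sigma_2$, the first disjunct $\sigma_2=\sigma_3$ is trivially satisfied, so ${\mathfrak{r}}^U(\sigma_1,\sigma_2,\sigma_2)$ holds for every pair $\sigma_1,\sigma_2\in U$, with no further hypotheses needed.

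Concretely, I would fix $U\in { \mathfrak{E}}^{\check{\mathfrak{S}}}$ and $\sigma_1,\sigma_2\in U$, observe that $\sigma_2$ and $\sigma_2$ are (trivially) pairwise consistent so the relation ${\mathfrak{r}}^U$ is applicable to the triple $(\sigma_1,\sigma_2,\sigma_2)$, and then invoke the left disjunct in the definition of colinearity. There is genuinely nothing to compute here: the statement holds by the very shape of the definition, and this is presumably why the authors isolate it as a named lemma (it is one of the Veblen–Young incidence axioms that the colinearity relation must satisfy, and verifying it costs one line).

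The only "obstacle" — and it is a purely bookkeeping one — is making sure the relation ${\mathfrak{r}}^U$ is actually defined on the triple $(\sigma_1,\sigma_2,\sigma_2)$; but since ${\mathfrak{r}}^U$ was defined for all $\sigma_1,\sigma_2,\sigma_3\in U$ without any side condition (the consistency of $U$ already guarantees pairwise consistency of its elements), this is automatic. So the proof is a one-liner: the conclusion follows immediately from the clause "$\sigma_2=\sigma_3$" in the definition of ${\mathfrak{r}}^U$, taking $\sigma_3=\sigma_2$.

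\begin{proof}
Let $U\in { \mathfrak{E}}^{\check{\mathfrak{S}}}$ and $\sigma_1,\sigma_2\in U$. By the very definition of the colinearity relation ${\mathfrak{r}}^U$, we have ${\mathfrak{r}}^U(\sigma_1,\sigma_2,\sigma_3)$ as soon as $\sigma_2=\sigma_3$. Applying this with $\sigma_3:=\sigma_2$ yields ${\mathfrak{r}}^U(\sigma_1,\sigma_2,\sigma_2)$.
\end{proof}
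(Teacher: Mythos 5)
Your proof is correct and matches the paper's, which simply declares the statement ``Trivial'': taking $\sigma_3=\sigma_2$ triggers the disjunct $\sigma_2=\sigma_3$ in the definition of ${\mathfrak{r}}^U$, and nothing more is needed. The extra remark that ${\mathfrak{r}}^U$ is defined on all triples of elements of $U$ without side conditions is accurate and harmless.
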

\begin{proof}
Trivial.
\end{proof}

%\begin{definition}\label{nondegeneracyU}
%A consistent subset of states $U$ is said to be {\em non-degenerate} iff
%\begin{eqnarray}
%&&Card(U)>1,\\
%&&\forall \sigma_1,\sigma_2\in U,\sigma_1\not=\sigma_2,\exists \sigma_3\in U\smallsetminus \{\sigma_1,\sigma_2\}\;\vert\; (\sigma_1 \sqcap_{{}_{{\mathfrak{S}}}}\sigma_2)\sqcoversubset_{{}_{{\mathfrak{S}}}}\sigma_3\\
%&&\forall \sigma_1,\sigma_2\in U,\sigma_1\not=\sigma_2, \exists \sigma_3\in U\;\vert\; (\sigma_1 \sqcap_{{}_{{\mathfrak{S}}}}\sigma_2)\not\!\!\!\sqcoversubset_{{}_{{\mathfrak{S}}}}\sigma_3.
%\end{eqnarray}
%\end{definition}

Before studying the different properties of the colinearity relation, we begin to prove an important lemma (Lemma \ref{nondegenerateallsigmapure}) describing the structure of the consistent subsets. But before that, we will check a small technical result.

\begin{lemma}\label{lemmazetacup}
Let us consider $\chi\in ({\check{\mathfrak{S}}}\smallsetminus \overline{\mathfrak{S}})$. We have
\begin{eqnarray}
&&\hspace{0.5cm} \forall \sigma_1,\sigma_2\in \Theta^{\overline{ \mathfrak{S}}}(\chi),\sigma_1\not=\sigma_2,\;\;\;\; \sigma_1\sqcup_{{}_{{ \mathfrak{S}}}}\sigma_2=\chi.\;\;\;\;\;\;\;\;\;\;\;\;\;\;\;\;\;\;
\label{thirdcoveringpropertySbarv}
\end{eqnarray}
\end{lemma}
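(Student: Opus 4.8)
\textbf{Proof plan for Lemma \ref{lemmazetacup}.}

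The plan is to leverage the explicit description of $\Theta^{\overline{\mathfrak{S}}}(\chi)$ that was worked out in the big computation of Subsection \ref{subsectionpreliminaryNdim} (in particular the formulas around \eqref{computationexample}, \eqref{expressionchiN=3} and Theorem \ref{lambdawr}). Recall that an element $\chi \in {\check{\mathfrak{S}}} \smallsetminus \overline{\mathfrak{S}}$ is by definition of the form $\mu^\star \sqcup_{{}_{{\mathfrak{S}}}} \gamma$ with $\gamma = \nu \sqcap_{{}_{\overline{\mathfrak{S}}}} \phi \sqcoversubset_{{}_{\overline{\mathfrak{S}}}} \nu,\phi$ and $\gamma \not\sqsupseteq_{{}_{\overline{\mathfrak{S}}}} \mu^\star$. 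First I would recall that $\chi = \Lambda^{\overline{\mathfrak{S}}}(\Theta^{\overline{\mathfrak{S}}}(\chi)) = \bigsqcup^{{}^{{\mathfrak{S}}}} \Theta^{\overline{\mathfrak{S}}}(\chi)$ by Lemma \ref{lemmaLambda}, so at least $\chi$ is \emph{some} supremum of the maximal real states below it. Hence $\sigma_1 \sqcup_{{}_{{\mathfrak{S}}}} \sigma_2 \sqsubseteq_{{}_{{\mathfrak{S}}}} \chi$ for any two $\sigma_1,\sigma_2 \in \Theta^{\overline{\mathfrak{S}}}(\chi)$, and the content of the lemma is the reverse inequality, i.e. that \emph{any} two distinct elements of $\Theta^{\overline{\mathfrak{S}}}(\chi)$ already suffice to rebuild $\chi$.

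The key step is that $\Theta^{\overline{\mathfrak{S}}}(\chi)$, when $\chi \notin \overline{\mathfrak{S}}$, consists of exactly three pure states $\{\gamma_0,\delta_0,\epsilon_0\}$ (in the $N=3$ normalization; in general $Card(\Theta^{\overline{\mathfrak{S}}}(\chi))=3$ by the explicit computation \eqref{expressionchiN=3} / \eqref{computationexample}), and moreover each one is an \emph{atom} of the interval below $\chi$ in the sense of \eqref{thirdcoveringpropertySbarii}: $\alpha \sqcoversubset_{{}_{{\mathfrak{S}}}} \chi$ for every $\alpha \in \Theta^{\overline{\mathfrak{S}}}(\chi)$. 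Using the covering relation, if $\sigma_1,\sigma_2 \in \Theta^{\overline{\mathfrak{S}}}(\chi)$ are distinct then $\sigma_1 \sqcup_{{}_{{\mathfrak{S}}}} \sigma_2$ is an element strictly above $\sigma_1$ (since $\sigma_2 \not\sqsubseteq_{{}_{{\mathfrak{S}}}} \sigma_1$, the two being distinct elements of $\Theta^{\overline{\mathfrak{S}}}(\chi)$, which is an antichain by Lemma \ref{JSbarfirstcondition}/\eqref{defJSbar2}) and lying below $\chi$; since $\sigma_1 \sqcoversubset_{{}_{{\mathfrak{S}}}} \chi$, the only element of ${\mathfrak{S}}$ strictly above $\sigma_1$ and at most $\chi$ is $\chi$ itself, whence $\sigma_1 \sqcup_{{}_{{\mathfrak{S}}}} \sigma_2 = \chi$. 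The only thing to double-check is that $\sigma_1 \sqcup_{{}_{{\mathfrak{S}}}} \sigma_2$ indeed exists in ${\mathfrak{S}}$ — but that is immediate because $\chi$ is a common upper bound of $\sigma_1$ and $\sigma_2$, and ${\mathfrak{S}}$ is down-complete so suprema of sets with a common upper bound exist (Theorem \ref{theoremcompletion} / the general remark after Definition \ref{definitionrealspaceofstates}).

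The main obstacle I anticipate is purely bookkeeping: one must make sure the "atom below $\chi$" property \eqref{thirdcoveringpropertySbarii} really applies to \emph{every} element of $\Theta^{\overline{\mathfrak{S}}}(\chi)$ and not merely to the canonical three listed in the $N=3$ computation, and that the antichain property of $\Theta^{\overline{\mathfrak{S}}}(\chi)$ (i.e. distinct elements are incomparable) is legitimately invoked — this is Lemma \ref{JSbarfirstcondition}, rephrased for the ontic completion as condition \eqref{RECdefJSbar2}. Once those two facts are in hand, the argument is a one-line application of the definition of the covering relation \eqref{defcoversubset}: $\sigma_1 \sqsubset_{{}_{{\mathfrak{S}}}} \sigma_1 \sqcup_{{}_{{\mathfrak{S}}}} \sigma_2 \sqsubseteq_{{}_{{\mathfrak{S}}}} \chi$ and $\sigma_1 \sqcoversubset_{{}_{{\mathfrak{S}}}} \chi$ force $\sigma_1 \sqcup_{{}_{{\mathfrak{S}}}} \sigma_2 = \chi$. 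I would write it up in exactly that order: existence of the join, strictness of $\sigma_1 \sqsubset \sigma_1 \sqcup \sigma_2$ via the antichain property, then conclude by the covering property.
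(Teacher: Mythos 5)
Your proposal is correct and follows essentially the same route as the paper: both rest on property (\ref{thirdcoveringpropertySbarii}) (every element of $\Theta^{\overline{\mathfrak{S}}}(\chi)$ is covered by $\chi$) and then conclude from $\sigma_1 \sqsubset_{{}_{{\mathfrak{S}}}} \sigma_1\sqcup_{{}_{{\mathfrak{S}}}}\sigma_2 \sqsubseteq_{{}_{{\mathfrak{S}}}} \chi$ together with $\sigma_1 \sqcoversubset_{{}_{{\mathfrak{S}}}} \chi$. Your extra remarks (cardinality of $\Theta^{\overline{\mathfrak{S}}}(\chi)$, explicit invocation of the antichain property and of the existence of the join) are harmless refinements of the same one-line covering argument the paper gives.
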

\begin{proof}
%Let us check the property (\ref{thirdcoveringpropertySbarv}).  
From property (\ref{thirdcoveringpropertySbarii}) we know that
$\forall \sigma\in \Theta^{\overline{ \mathfrak{S}}}(\chi),\;\;  
\sigma\sqcoversubset_{{}_{{ \mathfrak{S}}}} \chi$. Then, if we consider $\sigma_1,\sigma_2\in \Theta^{\overline{ \mathfrak{S}}}(\chi)$ with $\sigma_1\not=\sigma_2$, we have $\sigma_1 \sqsubset_{{}_{{ \mathfrak{S}}}} (\sigma_1\sqcup_{{}_{{ \mathfrak{S}}}}\sigma_2) \sqsubseteq_{{}_{{ \mathfrak{S}}}} \chi$ and $\sigma_1\sqcoversubset_{{}_{{ \mathfrak{S}}}} \chi$. Then, we conclude that $(\sigma_1\sqcup_{{}_{{ \mathfrak{S}}}}\sigma_2)= \chi$.
\end{proof}

\begin{lemma}\label{nondegenerateallsigmapure}
Let $U\subseteq {\mathfrak{G}}^{\check{\mathfrak{S}}}$ be a consistent subset of states, such that there exists $\sigma_1,\sigma_2,\sigma_3,\sigma_4,\lambda\in U$ satisfying
\begin{eqnarray}
%&&\lambda\in ({\check{\mathfrak{S}}}\smallsetminus \overline{\mathfrak{S}})\\
\forall i,j\in \{1,2,3,4\}, i\not=j,&& \sigma_i\not=\sigma_j\\
\forall i\in \{1,2,3,4\}, && \sigma_i\not=\lambda\\
\forall i,j,k,l\in \{1,2,3,4\}, \{i,j\}\not=\{k,l\},&& (\sigma_i\sqcap_{{}_{{\mathfrak{S}}}}\sigma_j)\not=(\sigma_k\sqcap_{{}_{{\mathfrak{S}}}}\sigma_l),\\
\lambda \sqcoversupset_{{}_{{\mathfrak{S}}}} (\sigma_1 \sqcap_{{}_{{\mathfrak{S}}}}\sigma_2),&&\lambda \sqcoversupset_{{}_{{\mathfrak{S}}}} (\sigma_3 \sqcap_{{}_{{\mathfrak{S}}}}\sigma_4),
\end{eqnarray}
then we have necessarily 
\begin{eqnarray}
&&\forall i\in \{1,2,3,4\},\; \sigma_i\in \overline{\mathfrak{S}}{}^{{}^{pure}}.
\end{eqnarray}
%As a consequence, if $U$ is a non-degenerate consistent subset of states, then we have
%\begin{eqnarray}
%Card(U\cap ({\check{\mathfrak{S}}}\smallsetminus \overline{ \mathfrak{S}}))\leq 1.
%\end{eqnarray}
\end{lemma}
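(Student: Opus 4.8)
\textbf{Proof plan for Lemma \ref{nondegenerateallsigmapure}.}

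The strategy is proof by contradiction combined with a careful case analysis of which of the states $\sigma_1,\dots,\sigma_4,\lambda$ lie in $\overline{\mathfrak{S}}$ and which lie in $\check{\mathfrak{S}}\smallsetminus\overline{\mathfrak{S}}$. The crucial structural inputs are: (i) Theorem \ref{lambdawr}, which tells us precisely when a state $\chi\in\check{\mathfrak{S}}\smallsetminus\overline{\mathfrak{S}}$ of the form $\mu^\star\sqcup_{{}_{\mathfrak{S}}}(\nu\sqcap_{{}_{\overline{\mathfrak{S}}}}\phi)$ is orthogonal to some $\mu$, via the $\wr$ condition on the partial traces; (ii) the fact, from Theorem \ref{lambdawr} together with Theorem \ref{zetalambda}, that whenever $(\alpha\sqcap_{{}_{\overline{\mathfrak{S}}}}\beta),(\gamma\sqcap_{{}_{\overline{\mathfrak{S}}}}\delta)\sqcoversubset_{{}_{\overline{\mathfrak{S}}}}\lambda$ with the five pure states distinct, the five states must agree on all but two of the partial-trace components; and (iii) Lemma \ref{lemmazetacup} which identifies $\chi$ as the supremum of any two distinct elements of $\Theta^{\overline{\mathfrak{S}}}(\chi)$.

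First I would dispose of the cases where $\lambda$ is real. If $\lambda\in\overline{\mathfrak{S}}{}^{{}^{pure}}$ then the hypotheses $\lambda\sqcoversupset_{{}_{{\mathfrak{S}}}}(\sigma_1\sqcap_{{}_{{\mathfrak{S}}}}\sigma_2)$ and $\lambda\sqcoversupset_{{}_{{\mathfrak{S}}}}(\sigma_3\sqcap_{{}_{{\mathfrak{S}}}}\sigma_4)$ force each $\sigma_i\sqcap_{{}_{{\mathfrak{S}}}}\sigma_j$ to sit immediately below a pure real state; using the expansion formula (\ref{developmentetildeordersimplify}) and the covering property (\ref{coveringpropertySbar}) one checks that if some $\sigma_i$ were in $\check{\mathfrak{S}}\smallsetminus\overline{\mathfrak{S}}$, then $\sigma_i\sqcap_{{}_{{\mathfrak{S}}}}\sigma_j$ could not be covered by a single pure \emph{real} state (because $\Theta^{\overline{\mathfrak{S}}}(\sigma_i)$ has at least two incomparable maximal real elements below $\sigma_i$, none of them below a common pure real state jointly with $\sigma_j$), contradiction; so all $\sigma_i\in\overline{\mathfrak{S}}{}^{{}^{pure}}$, and moreover the distinctness and ``no repeated meet'' hypotheses make the five states genuinely in ``general position,'' so the lemma holds. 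The remaining case is $\lambda\in\check{\mathfrak{S}}\smallsetminus\overline{\mathfrak{S}}$: here I would suppose for contradiction that, say, $\sigma_1\in\check{\mathfrak{S}}\smallsetminus\overline{\mathfrak{S}}$, and derive a contradiction from the simultaneous covering conditions. The point is that $\lambda\sqcoversupset_{{}_{{\mathfrak{S}}}}(\sigma_1\sqcap_{{}_{{\mathfrak{S}}}}\sigma_2)$ with $\lambda$ and $\sigma_1$ both ``hidden'' and $\sigma_1\neq\lambda$ forces $\Theta^{\overline{\mathfrak{S}}}(\lambda)$ and $\Theta^{\overline{\mathfrak{S}}}(\sigma_1)$ to be related in a rigid way (via (\ref{thirdcoveringpropertySbarii})--(\ref{thirdcoveringpropertySbariv}) and Lemma \ref{lemmazetacup}); combining this with the \emph{second} covering condition $\lambda\sqcoversupset_{{}_{{\mathfrak{S}}}}(\sigma_3\sqcap_{{}_{{\mathfrak{S}}}}\sigma_4)$ and the hypothesis $(\sigma_1\sqcap_{{}_{{\mathfrak{S}}}}\sigma_2)\neq(\sigma_3\sqcap_{{}_{{\mathfrak{S}}}}\sigma_4)$ pins down $\Theta^{\overline{\mathfrak{S}}}(\lambda)$ to have at least three elements whose pairwise meets are mutually distinct, and then one shows that no assignment of $\sigma_1,\dots,\sigma_4$ consistent with pairwise consistency (the $\asymp$ relation) and with $\sigma_1$ hidden can realize both coverings. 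This is where Theorem \ref{lambdawr} does the real work: it constrains $\Theta^{\overline{\mathfrak{S}}}(\lambda)$ to the explicit three-element form $\{\Delta^{(j,k);\nu}_{\alpha''\widetilde{\otimes}\beta''}\sqcap\Delta^{(j,k);\nu}_{\alpha'\widetilde{\otimes}\beta'},\ \Delta^{(j,k);\nu}_{\alpha''\widetilde{\otimes}\beta^\star}\sqcap\Delta^{(j,k);\nu}_{\alpha^\star\widetilde{\otimes}\beta'},\ \Delta^{(j,k);\nu}_{\alpha'\widetilde{\otimes}\beta^\star}\sqcap\Delta^{(j,k);\nu}_{\alpha^\star\widetilde{\otimes}\beta''}\}$, and a direct inspection of these three elements and their pairwise meets shows that the elements covering these meets from above are forced to be pure tensors, i.e.\ pure real states.

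Concretely the order of steps would be: (1) reduce to $N=3$ as the authors consistently do, since the iterated case follows by the recursion already used in Theorems \ref{firstcoveringlemma}, \ref{secondcoveringlemma}; (2) handle $\lambda\in\overline{\mathfrak{S}}{}^{{}^{pure}}$ using (\ref{developmentetildeordersimplify}) and (\ref{coveringpropertySbar}); (3) assume $\lambda\in\check{\mathfrak{S}}\smallsetminus\overline{\mathfrak{S}}$ and use Lemma \ref{lemmazetacup} and property (\ref{thirdcoveringpropertySbarv}) to write each $\sigma_i$ involved in a covering as a supremum of two elements of $\Theta^{\overline{\mathfrak{S}}}(\lambda)$, or as a pure real state covered by an element of $\Theta^{\overline{\mathfrak{S}}}(\lambda)$; (4) use the explicit description of $\Theta^{\overline{\mathfrak{S}}}(\lambda)$ from Theorem \ref{lambdawr} (case 3C, equation (\ref{computationexample})) to enumerate the finitely many possibilities for $(\sigma_1,\sigma_2,\sigma_3,\sigma_4)$ compatible with the distinctness hypotheses; (5) check in each surviving configuration that all $\sigma_i\in\overline{\mathfrak{S}}{}^{{}^{pure}}$, the hidden-state configurations being eliminated because they would force two of the pairwise meets to coincide or would violate the consistency relation $\asymp$ (via the cardinality remarks (\ref{Rek1}), (\ref{Rek2})).

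The main obstacle I anticipate is step (4)--(5): controlling the combinatorics of which triples or quadruples of elements of $\Theta^{\overline{\mathfrak{S}}}(\lambda)$ can simultaneously be the ``meet-data'' of two coverings, while keeping track of which $\sigma_i$ are allowed to be hidden. The distinctness hypotheses $(\sigma_i\sqcap_{{}_{{\mathfrak{S}}}}\sigma_j)\neq(\sigma_k\sqcap_{{}_{{\mathfrak{S}}}}\sigma_l)$ are doing a lot of work here and it will be important to use them at exactly the right moments; one likely needs to observe that a hidden $\sigma_i$ forces $\sigma_i\sqcap_{{}_{{\mathfrak{S}}}}\sigma_j = \sigma_i\sqcap_{{}_{{\mathfrak{S}}}}\sigma_k$ for the two partners $\sigma_j,\sigma_k$ that are ``on either side'' of $\sigma_i$ inside $\Theta^{\overline{\mathfrak{S}}}(\lambda)$, contradicting the distinctness of meets. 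Once that observation is isolated the rest should be a finite, if slightly tedious, verification.
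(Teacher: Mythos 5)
Your proposal is an outline rather than a proof, and the two places where it deviates from a complete argument are exactly where the lemma's content lies. First, your dispatch of the case $\lambda\in\overline{\mathfrak{S}}{}^{{}^{pure}}$ rests on the claim that if some $\sigma_i$ were hidden then $\sigma_i\sqcap_{{}_{\mathfrak{S}}}\sigma_j$ ``could not be covered by a single pure real state.'' As stated this is not true: whenever $\sigma_i\in\check{\mathfrak{S}}\smallsetminus\overline{\mathfrak{S}}$ and $\sigma_j\in\overline{\mathfrak{S}}{}^{{}^{pure}}$ with $\sigma_i\asymp\sigma_j$, the meet is precisely an element of $\Theta^{\overline{\mathfrak{S}}}(\sigma_i)$ and it \emph{is} covered by the pure real state $\sigma_j$ (this is property (\ref{Rek2}) together with (\ref{thirdcoveringpropertySbarii})). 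What has to be excluded is that one and the same $\lambda$ covers \emph{both} meets $(\sigma_1\sqcap\sigma_2)$ and $(\sigma_3\sqcap\sigma_4)$ under the distinctness hypotheses, and that requires the detailed interplay of the two covering conditions; the paper in fact never isolates the $\lambda$-pure case — each of its cases treats both possibilities for $\lambda$ uniformly, invoking (\ref{secondcoveringpropertySbar}) when $\lambda$ is pure and (\ref{thirdcoveringpropertySbariii}) when it is hidden.

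Second, the heart of the lemma is the enumeration and elimination of the configurations in which one or more of $\sigma_1,\dots,\sigma_4$ are hidden, and this is precisely what you defer as ``the main obstacle I anticipate.'' The paper's Appendix is devoted entirely to it: it splits into cases according to which of the $\sigma_i$ lie in $\check{\mathfrak{S}}\smallsetminus\overline{\mathfrak{S}}$ (all four; three; two adjacent in the consistency chain, in both arrangements; two opposite; one), writes each hidden $\sigma_i$ as $\pi_i\sqcup_{{}_{\overline{\mathfrak{S}}}}\rho_i$ with components shared along the chain $\sigma_1\asymp\sigma_2\asymp\sigma_3\asymp\sigma_4\asymp\sigma_1$ (via (\ref{Rek1}), (\ref{thirdcoveringpropertySbarv})), and then drives the contradiction with the fundamental formula (\ref{fundamentalformula}) — an ingredient your plan never mentions — culminating in each case either in the identification of two of the $\sigma_i$, or in an equality of two pairwise meets, or in a violation of (\ref{thirdcoveringpropertySbariv}). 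Your alternative idea of substituting the explicit three-element description of $\Theta^{\overline{\mathfrak{S}}}(\lambda)$ from Theorem \ref{lambdawr} and Lemma \ref{lemmazetacup} is not unreasonable as an organizing device, but until the case-by-case eliminations are actually carried out (and your closing guess about which equalities of meets get violated is only a guess at their shape), the proof is missing its substance.
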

\begin{proof}
%Let us then consider a consistent subset of states denoted $U$ and let us suppose that there exists $\sigma_1,\sigma_2,\sigma_3,\sigma_4,\lambda\in U$ satisfying
%\begin{eqnarray}
%&&\lambda\in ({\check{\mathfrak{S}}}\smallsetminus \overline{\mathfrak{S}})\\
%\forall i,j\in \{1,2,3,4\}, i\not=j,&& \sigma_i\not=\sigma_j\\
%\forall i\in \{1,2,3,4\}, && \sigma_i\not=\lambda\\
%\forall i,j,k,l\in \{1,2,3,4\}, \{i,j\}\not=\{k,l\},&& (\sigma_i\sqcap_{{}_{{\mathfrak{S}}}}\sigma_j)\not=(\sigma_k\sqcap_{{}_{{\mathfrak{S}}}}\sigma_l),\\
%\lambda \sqcoversupset_{{}_{{\mathfrak{S}}}} (\sigma_1 \sqcap_{{}_{{\mathfrak{S}}}}\sigma_2),&&\lambda \sqcoversupset_{{}_{{\mathfrak{S}}}} (\sigma_3 \sqcap_{{}_{{\mathfrak{S}}}}\sigma_4),
%\end{eqnarray}
%Our aim is now to prove that we have necessarily $\forall i\in \{1,2,3,4\},\; \sigma_i\in \overline{\mathfrak{S}}{}^{{}^{pure}}$. \\

The proof is in the Appendix.
\end{proof}

\begin{lemma}\label{typoU}
Let us consider $U\in { \mathfrak{E}}^{\check{\mathfrak{S}}}$ such that there exists $\chi\in ({\check{\mathfrak{S}}}\smallsetminus \overline{ \mathfrak{S}})$ with $\chi\in U$.  The subset $U$ is necessarily of one of the two following types
\begin{itemize}
\item (Consistent Type 1) $\existunique \alpha\in \Theta^{\overline{ \mathfrak{S}}}(\chi)$ such that $\forall \sigma\in U$ we have $(\sigma\sqcap_{{}_{\overline{ \mathfrak{S}}}}\chi)=\alpha$.
\item (Consistent Type 2) $\exists \alpha,\beta\in \Theta^{\overline{ \mathfrak{S}}}(\chi),\alpha\not=\beta$ and $\exists \sigma,\sigma'\in U\cap \overline{ \mathfrak{S}}{}^{{}^{pure}}$ such that $(\sigma\sqcap_{{}_{\overline{ \mathfrak{S}}}}\chi)=\alpha$ and $(\sigma'\sqcap_{{}_{\overline{ \mathfrak{S}}}}\chi)=\beta$.
\end{itemize}
\end{lemma}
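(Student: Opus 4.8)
The plan is to fix $U\in { \mathfrak{E}}^{\check{\mathfrak{S}}}$ containing some $\chi\in ({\check{\mathfrak{S}}}\smallsetminus \overline{ \mathfrak{S}})$ and analyze the possible values taken by $(\sigma\sqcap_{{}_{\overline{ \mathfrak{S}}}}\chi)$ as $\sigma$ ranges over $U$. The first observation is that for every $\sigma\in U$ we have $\sigma\asymp \chi$, so by the definition of the consistency relation there exists $\epsilon\in \Theta^{\overline{ \mathfrak{S}}}(\chi)$ with $\epsilon=(\sigma\sqcap_{{}_{{\mathfrak{S}}}}\sigma')$ (if $\sigma\in {\check{\mathfrak{S}}}\smallsetminus \overline{\mathfrak{S}}$, after also using $\sigma\asymp\chi$ and Remark (\ref{Rek1}), or $\sigma=\chi$ itself), or $\epsilon\sqcoversubset_{{}_{\overline{ \mathfrak{S}}}}\sigma$ with $\epsilon\in \Theta^{\overline{ \mathfrak{S}}}(\chi)$ (if $\sigma\in \overline{ \mathfrak{S}}{}^{{}^{pure}}$). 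In either case, $(\sigma\sqcap_{{}_{\overline{ \mathfrak{S}}}}\chi)\in \Theta^{\overline{ \mathfrak{S}}}(\chi)$ and is uniquely determined by $\sigma$ thanks to Remarks (\ref{Rek1}) and (\ref{Rek2}). So we obtain a well-defined map $U\to \Theta^{\overline{ \mathfrak{S}}}(\chi)$, $\sigma\mapsto (\sigma\sqcap_{{}_{\overline{ \mathfrak{S}}}}\chi)$, and the dichotomy of the statement is simply: either this map is constant (Type 1) or it takes at least two values, and in the latter case at least two of those values are attained by elements of $U\cap \overline{ \mathfrak{S}}{}^{{}^{pure}}$ (Type 2).

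The key step, then, is to rule out the degenerate mixed situation in which the map takes two values $\alpha\neq\beta$ but one of them is attained only by hidden states in $U\cap({\check{\mathfrak{S}}}\smallsetminus \overline{ \mathfrak{S}})$ and never by a pure real state. Suppose $\chi'\in U\cap({\check{\mathfrak{S}}}\smallsetminus \overline{ \mathfrak{S}})$ with $(\chi'\sqcap_{{}_{\overline{ \mathfrak{S}}}}\chi)=\beta$ and $\chi'\neq\chi$. By Remark (\ref{Rek1}), $\Theta^{\overline{ \mathfrak{S}}}(\chi)\cap \Theta^{\overline{ \mathfrak{S}}}(\chi')=\{\beta\}$ and $\chi\sqcap_{{}_{\mathfrak{S}}}\chi'=\beta$. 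Using Lemma \ref{lemmazetacup}, for any two distinct elements of $\Theta^{\overline{ \mathfrak{S}}}(\chi')$ their supremum is $\chi'$, and likewise for $\Theta^{\overline{ \mathfrak{S}}}(\chi)$; I would pick a suitable quadruple $\sigma_1,\sigma_2\in \Theta^{\overline{ \mathfrak{S}}}(\chi)$ and $\sigma_3,\sigma_4\in \Theta^{\overline{ \mathfrak{S}}}(\chi')$, all distinct from each other and with pairwise meets all distinct (possible because $\Theta^{\overline{ \mathfrak{S}}}(\chi)$ and $\Theta^{\overline{ \mathfrak{S}}}(\chi')$ are of cardinality $\geq 3$ by (\ref{thirdcoveringpropertySbarii})), arranged so that $\chi\sqcoversupset_{{}_{{\mathfrak{S}}}}(\sigma_1\sqcap_{{}_{\mathfrak{S}}}\sigma_2)$ and $\chi'\sqcoversupset_{{}_{{\mathfrak{S}}}}(\sigma_3\sqcap_{{}_{\mathfrak{S}}}\sigma_4)$, and observe that all of $\sigma_1,\dots,\sigma_4,\chi,\chi'$ lie in the consistent subset $U$ (this uses that $\asymp$ holds between each $\sigma_i$ and $\chi,\chi'$, directly from the definitions and (\ref{thirdcoveringpropertySbarii})). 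Then one invokes Lemma \ref{nondegenerateallsigmapure} with $\lambda:=\chi$ or $\lambda:=\chi'$ — but that lemma forces $\sigma_1,\dots,\sigma_4\in \overline{\mathfrak{S}}{}^{{}^{pure}}$, which is already true by construction, so instead the contradiction must come from choosing $\lambda$ to be one of the hidden states and one pair of indices across the two $\Theta$'s; more precisely, I would set up the four states so that $(\sigma_i\sqcap_{{}_{\mathfrak{S}}}\sigma_j)$-type meets land on $\chi$ or $\chi'$, and arrange for a configuration where Lemma \ref{nondegenerateallsigmapure} applies with a hidden $\lambda$ among the $\sigma$'s, yielding the contradiction "$\chi'\in\overline{\mathfrak{S}}{}^{{}^{pure}}$".

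The main obstacle is exactly this combinatorial bookkeeping: choosing the quadruple and $\lambda$ so that Lemma \ref{nondegenerateallsigmapure} becomes applicable, since that lemma requires all its $\sigma_i$ distinct, distinct from $\lambda$, with all pairwise meets distinct, and $\lambda$ a common upper cover of two of the meets — and the conclusion it delivers ($\sigma_i\in\overline{\mathfrak{S}}{}^{{}^{pure}}$) must contradict the presence of a hidden state in the chosen tuple. I expect this to require splitting into a couple of sub-cases according to whether the second value $\beta$ is attained by another pure state or only by hidden states, and according to whether $|U\cap({\check{\mathfrak{S}}}\smallsetminus \overline{ \mathfrak{S}})|\geq 2$. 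Once the mixed-degenerate configuration is excluded, the statement follows: if the map $\sigma\mapsto(\sigma\sqcap_{{}_{\overline{ \mathfrak{S}}}}\chi)$ is constant we are in Type 1 and the unique value is $\alpha:=(\chi\sqcap_{{}_{\overline{ \mathfrak{S}}}}\chi)$ itself (noting $\chi\in U$), with uniqueness immediate; otherwise there are two distinct values $\alpha,\beta$, and by the exclusion just proved each is realized by a genuine pure real state $\sigma,\sigma'\in U\cap \overline{ \mathfrak{S}}{}^{{}^{pure}}$, which is Type 2. I would close by remarking that Types 1 and 2 are mutually exclusive by construction, so the dichotomy is exhaustive and exclusive.
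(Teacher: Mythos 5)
Your opening reduction is sound and is essentially the route the paper declares (the paper's proof is just ``direct consequence of Lemma \ref{nondegenerateallsigmapure} and maximality''): for every $\sigma\in U$ other than $\chi$, consistency forces $(\sigma\sqcap\chi)$ to be a well-defined element of $\Theta^{\overline{\mathfrak{S}}}(\chi)$, and the whole content of the lemma is to exclude the mixed situation in which a second value is realized only by hidden states. The problem is that your concrete attempt to exclude it via Lemma \ref{nondegenerateallsigmapure} is ill-formed. You propose taking $\sigma_1,\sigma_2\in\Theta^{\overline{\mathfrak{S}}}(\chi)$ and $\sigma_3,\sigma_4\in\Theta^{\overline{\mathfrak{S}}}(\chi')$ as the four points. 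But elements of $\Theta^{\overline{\mathfrak{S}}}(\chi)$ are real states of the form $\varphi_\alpha\sqcap_{\overline{\mathfrak{S}}}\psi_\alpha$ (strictly below pure states, and not hidden), so they do not belong to ${\mathfrak{G}}^{\check{\mathfrak{S}}}$: the relation $\asymp$ is not defined for them, they cannot be members of any consistent subset, and Lemma \ref{nondegenerateallsigmapure}, whose hypothesis is $\sigma_1,\dots,\sigma_4,\lambda\in U\subseteq{\mathfrak{G}}^{\check{\mathfrak{S}}}$, cannot be applied to them. Moreover the covering hypotheses would fail anyway: by (\ref{thirdcoveringpropertySbariii}) and (\ref{thirdcoveringpropertySbarii}), for distinct $\sigma_1,\sigma_2\in\Theta^{\overline{\mathfrak{S}}}(\chi)$ one has $\sigma_1\sqcap\sigma_2\sqcoversubset\sigma_1\sqcoversubset\chi$, so $\chi$ does \emph{not} cover $\sigma_1\sqcap\sigma_2$; Lemma \ref{lemmazetacup} is about joins of elements of $\Theta^{\overline{\mathfrak{S}}}(\chi)$, not meets. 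Finally, your use of maximality is off: pairwise consistency with $\chi$ and $\chi'$ does not put an element into the maximal subset $U$ (it must be consistent with all of $U$), while for applying Lemma \ref{nondegenerateallsigmapure} you do not need the auxiliary points in $U$ at all, only in some consistent subset, which requires verifying all pairwise consistencies among the five chosen elements.

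So the genuine gap is precisely the step you defer (``I would set up the four states so that\dots'', ``I expect this to require splitting into sub-cases''): you never exhibit a legitimate configuration with apex $\lambda:=\chi$ whose four points include the hidden state $\chi'$ realizing the second value together with suitable \emph{pure} states (e.g.\ $\varphi_\alpha,\psi_\alpha$ with $\varphi_\alpha\sqcap_{\overline{\mathfrak{S}}}\psi_\alpha=\alpha$, or a pure $\tau\sqsupseteq\beta$ so that $\chi'\sqcap\tau=\beta$), all pairwise consistent, pairwise distinct and distinct from $\chi$, with all six pairwise meets distinct, so that the conclusion of Lemma \ref{nondegenerateallsigmapure} (all four points pure) contradicts the hiddenness of $\chi'$. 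Without that construction the mixed case is not excluded and the dichotomy is not proved. Two smaller slips: the fact that $\Theta^{\overline{\mathfrak{S}}}(\chi)$ has at least three elements is not what (\ref{thirdcoveringpropertySbarii}) states (it is read off the explicit description, cf.\ Theorem \ref{lambdawr}); and in your Type-1 wrap-up the constant value cannot be ``$\chi\sqcap\chi$'', since that equals $\chi\notin\Theta^{\overline{\mathfrak{S}}}(\chi)$ — it must be read off the elements of $U$ other than $\chi$.
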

\begin{proof}
Direct consequence of Lemma \ref{nondegenerateallsigmapure} and the assumption of maximality on $U$. 
\end{proof}

\begin{lemma}\label{lemmaconsistenttype2}
Let us consider a maximal consistent subset of states $U\in { \mathfrak{E}}^{\check{\mathfrak{S}}}$ and let us assume that there exists $\lambda\in ({\check{\mathfrak{S}}}\smallsetminus \overline{ \mathfrak{S}})$ and $\sigma_1,\sigma_2,\sigma_3,\sigma_4\in \overline{ \mathfrak{S}}{}^{{}^{pure}}$ such that $\lambda,\sigma_1,\sigma_2,\sigma_3,\sigma_4\in U$  and such that ${\mathfrak{r}}^U(\lambda,\sigma_1,\sigma_2)$ and ${\mathfrak{r}}^U(\lambda,\sigma_3,\sigma_4)$ and $\forall \mu,\nu,\rho\in \{\sigma_1,\sigma_2,\sigma_3,\sigma_4\}, \;\neg\; {\mathfrak{r}}^U(\mu,\nu,\rho)$.  Then, the consistent subset $U$ is necessarily of {\em consistent type 2}%, i.e.  $\exists \epsilon,\delta\in \Theta^{\overline{ \mathfrak{S}}}(\chi),\epsilon\not=\delta$ and $\exists \sigma,\sigma'\in U\cap \overline{ \mathfrak{S}}{}^{{}^{pure}}$ such that $(\sigma\sqcap_{{}_{\overline{ \mathfrak{S}}}}\chi)=\epsilon$ and $(\sigma'\sqcap_{{}_{\overline{ \mathfrak{S}}}}\chi)=\delta$
. 
%\\ We also note that there exists $\epsilon,\delta\in \Theta^{\overline{\mathfrak{S}}}(\lambda)$ such that $\epsilon\not=\delta$ and $\{\sigma_1,\sigma_2\}=\{\varphi_\epsilon,\psi_\epsilon\}$ and $\{\sigma_3,\sigma_4\}=\{\varphi_\delta,\psi_\delta\}$.
\end{lemma}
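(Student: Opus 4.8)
The plan is to exploit Lemma \ref{typoU}, which already tells us that a maximal consistent subset $U$ containing an element $\chi=\lambda \in ({\check{\mathfrak{S}}}\smallsetminus \overline{ \mathfrak{S}})$ is of one of two types: either (Consistent Type 1) there is a \emph{unique} $\alpha\in \Theta^{\overline{ \mathfrak{S}}}(\lambda)$ with $(\sigma\sqcap_{{}_{\overline{ \mathfrak{S}}}}\lambda)=\alpha$ for \emph{all} $\sigma\in U$, or (Consistent Type 2) there exist $\alpha\neq\beta$ in $\Theta^{\overline{ \mathfrak{S}}}(\lambda)$ and $\sigma,\sigma'\in U\cap\overline{\mathfrak{S}}^{pure}$ realizing these two distinct infima. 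So the whole statement reduces to showing that the hypotheses --- the existence of $\sigma_1,\sigma_2,\sigma_3,\sigma_4\in\overline{\mathfrak{S}}^{pure}\cap U$ with ${\mathfrak{r}}^U(\lambda,\sigma_1,\sigma_2)$, ${\mathfrak{r}}^U(\lambda,\sigma_3,\sigma_4)$, and no colinearity among any triple drawn from $\{\sigma_1,\sigma_2,\sigma_3,\sigma_4\}$ --- are incompatible with Consistent Type 1. Thus the core of the argument is: \emph{assume $U$ is of Consistent Type 1 and derive a contradiction.}

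First I would unpack the colinearity relations using the definition of ${\mathfrak{r}}^U$ together with the consistency relation $\asymp$. Since $\lambda\in{\check{\mathfrak{S}}}\smallsetminus\overline{\mathfrak{S}}$ and each $\sigma_i\in\overline{\mathfrak{S}}^{pure}$, the relation $\lambda\asymp\sigma_i$ forces (by definition of $\asymp$ for the mixed case, and Remark (\ref{Rek2})) that there is a \emph{unique} $\epsilon_i\in\Theta^{\overline{\mathfrak{S}}}(\lambda)\cap(\downarrow_{{}_{\overline{\mathfrak{S}}}}\sigma_i)$, and in fact $\epsilon_i=\lambda\sqcap_{{}_{{\mathfrak{S}}}}\sigma_i$ with $\epsilon_i\sqcoversubset_{{}_{\overline{\mathfrak{S}}}}\sigma_i$. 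Now ${\mathfrak{r}}^U(\lambda,\sigma_1,\sigma_2)$ means (as $\sigma_1\neq\sigma_2$) that $(\sigma_1\sqcap_{{}_{{\mathfrak{S}}}}\sigma_2)\sqcoversubset_{{}_{{\mathfrak{S}}}}\lambda$, i.e. $\sigma_1\sqcap_{{}_{{\mathfrak{S}}}}\sigma_2\in\Theta^{\overline{\mathfrak{S}}}(\lambda)$ (using property (\ref{thirdcoveringpropertySbarii}) that the $\Theta^{\overline{\mathfrak{S}}}(\lambda)$-elements are exactly the things covered by $\lambda$ lying in $\overline{\mathfrak{S}}$, modulo the precise characterization worked out in the previous subsection). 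The contradiction with Consistent Type 1 will come from the fact that if $(\lambda\sqcap_{{}_{\overline{\mathfrak{S}}}}\sigma)=\alpha$ for all $\sigma\in U$, then in particular $\epsilon_1=\epsilon_2=\epsilon_3=\epsilon_4=\alpha$, so all four $\sigma_i$ lie in $\uparrow^{{}^{\overline{\mathfrak{S}}}}\alpha$ with $\alpha\sqcoversubset_{{}_{\overline{\mathfrak{S}}}}\sigma_i$; then I would show this collapses the two ``lines'' $\{\sigma_1,\sigma_2\}$ and $\{\sigma_3,\sigma_4\}$ through $\lambda$ so that $\sigma_1\sqcap_{{}_{{\mathfrak{S}}}}\sigma_2$ and $\sigma_3\sqcap_{{}_{{\mathfrak{S}}}}\sigma_4$ coincide (both equal to a single element of $\Theta^{\overline{\mathfrak{S}}}(\lambda)$ determined by $\alpha$ via (\ref{thirdcoveringpropertySbariv}) or the explicit $\Theta^{\overline{\mathfrak{S}}}$-description), and moreover that some triple among $\{\sigma_1,\sigma_2,\sigma_3,\sigma_4\}$ then satisfies ${\mathfrak{r}}^U$ --- contradicting the hypothesis $\forall\mu,\nu,\rho,\;\neg\,{\mathfrak{r}}^U(\mu,\nu,\rho)$.

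More precisely, the key structural input I would invoke is Lemma \ref{nondegenerateallsigmapure}: the hypotheses give me exactly the configuration $\sigma_1,\dots,\sigma_4,\lambda$ with all $\sigma_i$ pairwise distinct, all distinct from $\lambda$, with $\lambda\sqcoversupset_{{}_{{\mathfrak{S}}}}(\sigma_1\sqcap_{{}_{{\mathfrak{S}}}}\sigma_2)$ and $\lambda\sqcoversupset_{{}_{{\mathfrak{S}}}}(\sigma_3\sqcap_{{}_{{\mathfrak{S}}}}\sigma_4)$; I need to check the remaining hypothesis of that lemma, namely $(\sigma_i\sqcap_{{}_{{\mathfrak{S}}}}\sigma_j)\neq(\sigma_k\sqcap_{{}_{{\mathfrak{S}}}}\sigma_l)$ for $\{i,j\}\neq\{k,l\}$, which follows from the ``no colinear triple'' assumption (if two such infima agreed, one could produce a colinearity). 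That already yields, via Lemma \ref{nondegenerateallsigmapure}, that each $\sigma_i\in\overline{\mathfrak{S}}^{pure}$ --- but that's given; the real point is what this configuration \emph{forces on $U$}. I would then argue that a Consistent Type 1 structure cannot support two genuinely distinct ``lines'' through the same hidden point $\lambda$ because Type 1 pins down $\Theta^{\overline{\mathfrak{S}}}(\lambda)\cap(\downarrow\sigma)$ to the single value $\alpha$ for every $\sigma\in U$, whereas two distinct lines through $\lambda$ necessarily meet $\lambda$ in (at least) two distinct elements of $\Theta^{\overline{\mathfrak{S}}}(\lambda)$ (this is where (\ref{thirdcoveringpropertySbariv}), stating that distinct pairs of $\Theta^{\overline{\mathfrak{S}}}(\chi)$-elements have distinct pairwise infima, and Lemma \ref{lemmazetacup}, stating $\sigma_1\sqcup_{{}_{{\mathfrak{S}}}}\sigma_2=\chi$ for distinct $\sigma_1,\sigma_2\in\Theta^{\overline{\mathfrak{S}}}(\chi)$, do the heavy lifting).

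The main obstacle I anticipate is the bookkeeping around which elements of $\Theta^{\overline{\mathfrak{S}}}(\lambda)$ are ``hit'' by the $\sigma_i$'s and showing rigorously that in the Type 1 case they must all be hit at the same point $\alpha$, forcing $\sigma_1\sqcap_{{}_{{\mathfrak{S}}}}\sigma_2 = \sigma_3\sqcap_{{}_{{\mathfrak{S}}}}\sigma_4$ or else forcing a colinear triple; this requires carefully combining the covering relations $\epsilon_i\sqcoversubset_{{}_{\overline{\mathfrak{S}}}}\sigma_i$ with the structural properties (\ref{thirdcoveringpropertySbarii})--(\ref{thirdcoveringpropertySbariv}) and possibly (\ref{secondcoveringpropertySbar}), and treating the sub-cases according to whether $\sigma_1\sqcap_{{}_{{\mathfrak{S}}}}\sigma_2$ equals or differs from $\alpha$ (it cannot equal $\alpha$ since $\sigma_1\sqcap_{{}_{{\mathfrak{S}}}}\sigma_2$ is covered by $\lambda$ in $\Theta^{\overline{\mathfrak{S}}}(\lambda)$ while also $\epsilon_1=\alpha\sqsubseteq\sigma_1$, and if they agreed then $\sigma_1\sqcap_{{}_{{\mathfrak{S}}}}\sigma_2\sqsubseteq\sigma_1,\sigma_2$ with this being an element of $\Theta^{\overline{\mathfrak{S}}}(\lambda)$, which combined with $\alpha\sqsubseteq\sigma_1,\sigma_2$ and antisymmetry... ). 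I would organize the contradiction as: Type 1 $\Rightarrow$ all $\epsilon_i$ equal $\Rightarrow$ the two meet-points $\sigma_1\sqcap_{{}_{{\mathfrak{S}}}}\sigma_2$ and $\sigma_3\sqcap_{{}_{{\mathfrak{S}}}}\sigma_4$ are both elements of $\Theta^{\overline{\mathfrak{S}}}(\lambda)$ dominating the same atom-like element $\alpha$, which by (\ref{thirdcoveringpropertySbariv}) and the structure of $\Theta^{\overline{\mathfrak{S}}}(\lambda)$ forces them equal, contradicting $(\sigma_1\sqcap\sigma_2)\neq(\sigma_3\sqcap\sigma_4)$.
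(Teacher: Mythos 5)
Your proposal is correct and rests on the same two facts as the paper's own proof: the relations ${\mathfrak{r}}^U(\lambda,\sigma_1,\sigma_2)$ and ${\mathfrak{r}}^U(\lambda,\sigma_3,\sigma_4)$ place $\sigma_1\sqcap_{{}_{\overline{\mathfrak{S}}}}\sigma_2$ and $\sigma_3\sqcap_{{}_{\overline{\mathfrak{S}}}}\sigma_4$ in $\Theta^{\overline{\mathfrak{S}}}(\lambda)$, and the no-colinear-triple hypothesis forces these two elements to be distinct, which, being realized by the pure states $\sigma_1,\sigma_3\in U$, is exactly the defining condition of consistent type 2. Your only deviation is cosmetic: you reach the conclusion by contradiction with consistent type 1 through the dichotomy of Lemma \ref{typoU}, whereas the paper verifies the type-2 condition directly; the underlying argument is the same.
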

\begin{proof}
As long as $\lambda\in ({\check{\mathfrak{S}}}\smallsetminus \overline{ \mathfrak{S}})$, $\Theta^{\overline{\mathfrak{S}}}(\lambda) $ is given as in Theorem \ref{lambdawr}.\\
The conditions ${\mathfrak{r}}^U(\lambda,\sigma_1,\sigma_2)$ and ${\mathfrak{r}}^U(\lambda,\sigma_3,\sigma_4)$ and $\forall \mu,\nu,\rho\in \{\sigma_1,\sigma_2,\sigma_3,\sigma_4\}, \;\neg\; {\mathfrak{r}}^U(\mu,\nu,\rho)$ impose the existence of $\epsilon,\delta\in \Theta^{\overline{\mathfrak{S}}}(\lambda)$ such that $\epsilon\not=\delta$ and $(\sigma_1\sqcap_{{}_{\overline{\mathfrak{S}}}}\sigma_2)=\epsilon$ and $(\sigma_3\sqcap_{{}_{\overline{\mathfrak{S}}}}\sigma_4)=\delta$.  We note that these conditions impose also the distinctness of the elements $\sigma_i$, i.e.  $\alpha'\not=\alpha''$ and $ \beta'\not=\beta''$.
\end{proof}

Let us now come back to the properties of the colinearity relation.

\begin{lemma}{\bf [Veblen-Young's second geometric property]}\label{lemmaproj2}
\begin{eqnarray}
\hspace{-1cm}\forall U\in { \mathfrak{E}}^{\check{\mathfrak{S}}}, \forall \sigma_1,\sigma_2,\sigma_3,\sigma_4\in U,&& (\,{\mathfrak{r}}^U(\sigma_1,\sigma_3,\sigma_4)\;\textit{\rm and}\;{\mathfrak{r}}^U(\sigma_2,\sigma_3,\sigma_4)\,)\;\;\Rightarrow\;\; {\mathfrak{r}}^U(\sigma_1,\sigma_2,\sigma_3).\label{VeblenYoung2}\;\;\;\;\;\;\;\;\;\;\;\;\;\;\;\;
\end{eqnarray}
\end{lemma}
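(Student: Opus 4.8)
\textbf{Proof plan for Lemma \ref{lemmaproj2} (Veblen--Young's second geometric property).}

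The plan is to unpack the ternary relation ${\mathfrak{r}}^U$ according to the types of the five states involved, using the structural results already established. The key tool is Lemma \ref{nondegenerateallsigmapure}, which forces states that are jointly involved in two distinct covering-meet configurations to be pure; combined with Lemma \ref{typoU} and Lemma \ref{lemmaconsistenttype2}, this severely restricts how a hidden-like element $\chi\in{\check{\mathfrak{S}}}\smallsetminus\overline{ \mathfrak{S}}$ can participate. The overall strategy is: first dispose of the degenerate cases where some of the $\sigma_i$ coincide (where the conclusion follows from Lemma \ref{lemmaproj1} or trivially from the definition of ${\mathfrak{r}}^U$ as an ``or'' of equalities); then handle the generic case where $\sigma_1,\sigma_2,\sigma_3,\sigma_4$ are pairwise distinct.

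In the generic case, the hypotheses ${\mathfrak{r}}^U(\sigma_1,\sigma_3,\sigma_4)$ and ${\mathfrak{r}}^U(\sigma_2,\sigma_3,\sigma_4)$ say $(\sigma_3\sqcap_{{}_{{\mathfrak{S}}}}\sigma_4)\sqcoversubset_{{}_{{\mathfrak{S}}}}\sigma_1$ and $(\sigma_3\sqcap_{{}_{{\mathfrak{S}}}}\sigma_4)\sqcoversubset_{{}_{{\mathfrak{S}}}}\sigma_2$; write $\epsilon:=(\sigma_3\sqcap_{{}_{{\mathfrak{S}}}}\sigma_4)$. So both $\sigma_1$ and $\sigma_2$ cover $\epsilon$ in ${\mathfrak{S}}$. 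I would split on whether $\sigma_1,\sigma_2\in\overline{ \mathfrak{S}}{}^{{}^{pure}}$ or one of them lies in ${\check{\mathfrak{S}}}\smallsetminus\overline{ \mathfrak{S}}$. If all of $\sigma_1,\sigma_2,\sigma_3,\sigma_4$ are pure, then $\sigma_1,\sigma_2\sqcoversupset_{{}_{\overline{ \mathfrak{S}}}}\epsilon$ forces, via the covering property (\ref{coveringpropertySbar}) and property (\ref{secondcoveringpropertySbar}) applied to $\sigma_1,\sigma_2$ and the pair realizing $\epsilon$, that $(\sigma_1\sqcap_{{}_{\overline{ \mathfrak{S}}}}\sigma_2)$ is either $\epsilon$ itself (if $\sigma_1\sqcap_{{}_{{\mathfrak{S}}}}\sigma_2=\sigma_3\sqcap_{{}_{{\mathfrak{S}}}}\sigma_4$) or covers a common lower bound; in either subcase one checks $(\sigma_1\sqcap_{{}_{{\mathfrak{S}}}}\sigma_2)\sqcoversubset_{{}_{{\mathfrak{S}}}}\sigma_3$ because $\sigma_3\sqsupseteq_{{}_{{\mathfrak{S}}}}\epsilon$ and an elementary height/covering argument in ${\mathfrak{S}}$ pins down $\sigma_1\sqcap_{{}_{{\mathfrak{S}}}}\sigma_2$. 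When some $\sigma_i$ is in ${\check{\mathfrak{S}}}\smallsetminus\overline{ \mathfrak{S}}$, I would invoke Lemma \ref{typoU}: if the consistent subset $U$ contains such a $\chi$ it is of consistent type 1 or 2. In type 1 all meets with $\chi$ collapse to a single atom, which quickly makes the three-way colinearity among the pure partners forced; in type 2 I would use Lemma \ref{lemmaconsistenttype2} together with the explicit description of $\Theta^{\overline{ \mathfrak{S}}}(\chi)$ from Theorem \ref{lambdawr} to reduce to a finite, explicitly computable configuration — essentially a small projective plane of atoms $\{\gamma,\delta,\epsilon\}$ with their $\varphi,\psi$ lifts — where Veblen--Young's axiom can be verified by direct inspection.

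The main obstacle I expect is the bookkeeping in the mixed case where two of the four states are hidden-type elements lying over \emph{different} atoms, and I would want to verify $\Theta^{\overline{ \mathfrak{S}}}(\sigma_1)\cap\Theta^{\overline{ \mathfrak{S}}}(\sigma_2)$ behaves coherently with the meets in ${\mathfrak{S}}$; here the technical Remarks (\ref{Rek1}), (\ref{Rek2}) and Lemma \ref{lemmazetacup} (giving $\sigma_1\sqcup_{{}_{{\mathfrak{S}}}}\sigma_2=\chi$ whenever two atoms lie below a common $\chi$) are the load-bearing facts, and the sheaf-type consistency of ${\mathfrak{r}}^U$ from Lemma \ref{lemmaproj0} lets one compute the relation in whichever maximal consistent subset is most convenient. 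Once the finitely many essentially distinct configurations are enumerated, each reduces to the covering combinatorics of $\overline{ \mathfrak{S}}{}^{{}^{pure}}$ already captured by (\ref{coveringpropertySbar}), (\ref{secondcoveringpropertySbar}) and the $N$-partite expansion formula (\ref{developmentetildeordersimplify}), so no genuinely new ingredient is needed beyond careful case management.
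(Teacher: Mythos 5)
Your generic-case computation contains the same germ as the paper's proof, but the paper's argument is far shorter and most of the machinery you designate as load-bearing does not engage with this lemma. The paper simply sets $\epsilon:=\sigma_3\sqcap\sigma_4$ (identified case by case, according to whether $\sigma_3,\sigma_4$ are pure or lie in ${\check{\mathfrak{S}}}\smallsetminus\overline{ \mathfrak{S}}$, via (\ref{Rek1}) and (\ref{Rek2})), notes that the hypotheses give $\epsilon\sqcoversubset\sigma_1$ and $\epsilon\sqcoversubset\sigma_2$, and then observes that $\epsilon\sqsubseteq\sigma_2\sqcap\sigma_3\sqsubseteq\sigma_2$ together with $\epsilon\sqcoversubset\sigma_2$ and $\sigma_2\neq\sigma_3$ forces $\sigma_2\sqcap\sigma_3=\epsilon$; hence $(\sigma_2\sqcap\sigma_3)\sqcoversubset\sigma_1$, which is literally ${\mathfrak{r}}^U(\sigma_1,\sigma_2,\sigma_3)$. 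No type-$1$/type-$2$ analysis of the consistent subset and no enumeration of $\Theta^{\overline{ \mathfrak{S}}}(\chi)$ configurations is needed: the ``elementary covering argument'' you mention only in passing is the entire proof.

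Two points in your plan need repair. First, you aim at $(\sigma_1\sqcap\sigma_2)\sqcoversubset\sigma_3$, which is ${\mathfrak{r}}^U(\sigma_3,\sigma_1,\sigma_2)$ rather than the stated conclusion ${\mathfrak{r}}^U(\sigma_1,\sigma_2,\sigma_3)$, whose definition requires $\sigma_2=\sigma_3$ or $(\sigma_2\sqcap\sigma_3)\sqcoversubset\sigma_1$; this is admissible only if you explicitly invoke the permutation-symmetry lemma proved immediately before Lemma \ref{lemmaproj2}, which your plan never cites, or else re-aim at $\sigma_2\sqcap\sigma_3$ as the paper does. Second, the tools you call key --- Lemma \ref{nondegenerateallsigmapure}, property (\ref{secondcoveringpropertySbar}), Lemma \ref{typoU}, Lemma \ref{lemmaconsistenttype2}, Theorem \ref{lambdawr} --- are all tailored to five-point configurations with two \emph{distinct} meets under a common cover, i.e.\ the setting of the third axiom (Lemma \ref{lemmaproj3}); here both hypotheses involve the single meet $\epsilon=\sigma_3\sqcap\sigma_4$, and in the pure case the covering relation forces $\sigma_1\sqcap\sigma_2=\epsilon=\sigma_3\sqcap\sigma_4$, so the hypothesis $(\sigma_1\sqcap\sigma_2)\neq(\sigma_3\sqcap\sigma_4)$ of (\ref{secondcoveringpropertySbar}) is never satisfied and your alternative ``covers a common lower bound'' branch is vacuous. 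With the symmetry lemma added and the inapplicable lemmas stripped out, your sketch collapses to the paper's two-line argument, so the proposal is salvageable but as written it misidentifies both the target meet and the mechanism that proves the statement.
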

\begin{proof}
This is a straightforward check.  We reproduce one of its cases. Let us then consider for example %that $\sigma_1,\sigma_2\in \overline{\mathfrak{S}}{}^{{}^{pure}}$ with $\sigma_1\not=\sigma_2$ and 
 $\sigma_3,\sigma_4\in ({\check{\mathfrak{S}}}\smallsetminus \overline{\mathfrak{S}})$ with $\sigma_3\not=\sigma_4$.  \\
Then, ${\mathfrak{r}}^U(\sigma_1,\sigma_3,\sigma_4)$ is equivalent to $\sigma_1\sqcoversupset_{{}_{{\mathfrak{S}}}}(\sigma_3 \sqcap_{{}_{{\mathfrak{S}}}}\sigma_4)$ or, using property (\ref{Rek1}),  $\epsilon \sqcoversubset_{{}_{\overline{\mathfrak{S}}}} \sigma_1$ with $\{\epsilon\}=\Theta^{\overline{ \mathfrak{S}}}(\sigma_3)\cap \Theta^{\overline{ \mathfrak{S}}}(\sigma_4)$. Analogously,  ${\mathfrak{r}}^U(\sigma_2,\sigma_3,\sigma_4)$ is equivalent to $\sigma_2\sqcoversupset_{{}_{{\mathfrak{S}}}}(\sigma_3 \sqcap_{{}_{{\mathfrak{S}}}}\sigma_4)$ or, using property (\ref{Rek1})% or (\ref{Rek2})
,  $\mu \sqcoversubset_{{}_{\overline{\mathfrak{S}}}} \sigma_2$ with $\{\mu\}=\Theta^{\overline{ \mathfrak{S}}}(\sigma_3)\cap \Theta^{\overline{ \mathfrak{S}}}(\sigma_4)$. Using property (\ref{Rek1})% or (\ref{Rek2})
, we also know that $\mu=\epsilon$.  We then note that $\mu=(\sigma_2 \sqcap_{{}_{{\mathfrak{S}}}}\sigma_3)$ and then $\sigma_1\sqcoversupset_{{}_{{\mathfrak{S}}}}(\sigma_2 \sqcap_{{}_{{\mathfrak{S}}}}\sigma_3)$, i.e. ${\mathfrak{r}}^U(\sigma_1,\sigma_2,\sigma_3)$.
\end{proof}

\begin{lemma}{\bf [Veblen-Young's third geometric property]}\label{lemmaproj3}\\
We will consider $\sigma_1,\sigma_2,\sigma_3,\sigma_4\in { \mathfrak{E}}^{\check{\mathfrak{S}}}$. We have the following result
 \begin{eqnarray}
&&\hspace{-1.5cm}(\,\exists U\in { \mathfrak{E}}^{\check{\mathfrak{S}}},\sigma_1,\sigma_2,\sigma_3,\sigma_4\in U,\exists \lambda\in U\;\vert\; {\mathfrak{r}}^U(\lambda,\sigma_1,\sigma_2)\;\textit{\rm and}\;{\mathfrak{r}}^U(\lambda,\sigma_3,\sigma_4)
\,)\;\;\Rightarrow\;\;\;\;\;\;\;\;\;\;\;\nonumber\\
%&&\hspace{-1.5cm} (\forall i\in \{1,2,3,4\},\; \sigma_i\in \overline{\mathfrak{S}}{}^{{}^{pure}})\;\; \textit{\rm and}\nonumber\\
&&\hspace{-1.5cm} (\,(\exists U',U''\in { \mathfrak{E}}^{\check{\mathfrak{S}}}\;\vert\;\sigma_1,\sigma_2,\sigma_3,\sigma_4\in U',U'')\;\textit{\rm and}\;\exists\mu\in U'\;\vert\; {\mathfrak{r}}^{U'}(\mu,\sigma_1,\sigma_3)\;\textit{\rm and}\;{\mathfrak{r}}^{U'}(\mu,\sigma_2,\sigma_4) \nonumber\\
&&\hspace{3.5cm} \textit{\rm and}\;\exists\chi\in U''\;\vert\; {\mathfrak{r}}^{U''}(\chi,\sigma_2,\sigma_3)\;\textit{\rm and}\;{\mathfrak{r}}^{U''}(\chi,\sigma_1,\sigma_4)\,).\;\;\;\;\;\;\;\;\;\;\;\;\;\;\;\;\label{VeblenYoung1}
\end{eqnarray}
\end{lemma}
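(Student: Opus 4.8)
The statement is the third (Pasch) Veblen--Young geometric property: if the ``lines'' $\sigma_1\sigma_2$ and $\sigma_3\sigma_4$ meet at $\lambda$ inside a consistent subset, then the pair $(\sigma_1\sigma_3,\sigma_2\sigma_4)$ meets and the pair $(\sigma_2\sigma_3,\sigma_1\sigma_4)$ meets, each inside an appropriate consistent subset. The plan is: (i) dispose of the degenerate configurations directly; (ii) reduce the generic configuration to a bipartite subspace by a coordinate argument; (iii) build the two intersection points by the explicit closure formulas already computed.

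\textbf{Degenerate cases.} First I would treat the situations where $\{\sigma_1,\sigma_2,\sigma_3,\sigma_4\}$ has fewer than four elements, where $\lambda\in\{\sigma_1,\sigma_2,\sigma_3,\sigma_4\}$, where three of the $\sigma_i$ are already colinear, or where $(\sigma_1\sqcap_{{}_{{\mathfrak{S}}}}\sigma_2)=(\sigma_3\sqcap_{{}_{{\mathfrak{S}}}}\sigma_4)$. In each of these, the required witnesses $\mu$ and $\chi$ can be taken among $\lambda,\sigma_1,\dots,\sigma_4$: e.g.\ when $\sigma_1=\sigma_3$ one checks, straight from the definition of ${\mathfrak{r}}^U$ and from its permutation-symmetry (established above), that $\chi:=\lambda$ and $\mu:=\sigma_2$ do the job with $U'=U''=U$; the other degenerate patterns are similar. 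Lemma~\ref{lemmaproj0} makes the precise choice of consistent subset immaterial throughout.

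\textbf{Reduction to the bipartite case.} Assume now the generic situation: four distinct $\sigma_i\neq\lambda$, no three colinear, $(\sigma_1\sqcap_{{}_{{\mathfrak{S}}}}\sigma_2)\neq(\sigma_3\sqcap_{{}_{{\mathfrak{S}}}}\sigma_4)$. By Lemma~\ref{nondegenerateallsigmapure} all four $\sigma_i$ lie in $\overline{\mathfrak{S}}{}^{{}^{pure}}$. Since $\lambda$ covers both $(\sigma_1\sqcap_{{}_{{\mathfrak{S}}}}\sigma_2)$ and $(\sigma_3\sqcap_{{}_{{\mathfrak{S}}}}\sigma_4)$, Theorem~\ref{zetalambda} (when $\lambda\in\overline{\mathfrak{S}}{}^{{}^{pure}}$) or Theorem~\ref{lambdawr} (when $\lambda\in{\check{\mathfrak{S}}}\smallsetminus\overline{\mathfrak{S}}$) produces a set $I$ of $N-2$ indices on which the partial traces of $\lambda,\sigma_1,\dots,\sigma_4$ all coincide. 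Writing $\{j,k\}=\{1,\dots,N\}\smallsetminus I$ and using the notation $\Delta^{(j,k);\cdot}$ of Definition~\ref{definUpsilonDelta}, every state involved is determined by its $(j,k)$-components, so the whole configuration lives inside $\overline{\mathfrak{S}}_{A_j}\widetilde{\otimes}\overline{\mathfrak{S}}_{A_k}$ and its ontic completion; this reduces the problem to $N=2$.

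\textbf{The bipartite construction and the main obstacle.} In the case $N=2$ one uses the explicit description of $cl_c^{{}^{\overline{\mathfrak{S}}}}$ worked out in Subsection~\ref{subsectionpreliminaryNdim} (the cases $(1),(2),(3\mathrm{A}),(3\mathrm{B}),(3\mathrm{C})$, in particular the formula (\ref{computationexample}) and the description of $\Theta^{\overline{\mathfrak{S}}}(\chi)$ in Theorem~\ref{lambdawr}) to describe, for two distinct pure tensors, the set of elements of ${\mathfrak{G}}^{\check{\mathfrak{S}}}$ above their infimum: it is the two tensors together with at most one hidden state whose $\Theta$ is known in closed form. Writing $\sigma_m=\alpha_m\widetilde{\otimes}\beta_m$ and translating the hypotheses ``$\lambda$ on $\sigma_1\sigma_2$'' and ``$\lambda$ on $\sigma_3\sigma_4$'' into equalities among the $\alpha_m,\beta_m$ and their stars through (\ref{developmentetildeordersimplify}), one reads off a candidate $\mu$ on $\sigma_1\sigma_3\cap\sigma_2\sigma_4$ and a candidate $\chi$ on $\sigma_2\sigma_3\cap\sigma_1\sigma_4$ given by the same formulas, and then checks that $\{\sigma_1,\dots,\sigma_4,\mu\}$ and $\{\sigma_1,\dots,\sigma_4,\chi\}$ are pairwise consistent (using the cardinality facts (\ref{Rek1})--(\ref{Rek2}) and the definition of $\asymp$), so that each sits in a maximal consistent subset of one of the types of Lemma~\ref{typoU} and Lemma~\ref{lemmaconsistenttype2}. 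The hard part is precisely this last step: it splits into several coordinate-agreement patterns for $(\alpha_1,\dots,\alpha_4)$ and $(\beta_1,\dots,\beta_4)$, and for each one must either exhibit $\mu$ (resp.\ $\chi$) in ${\check{\mathfrak{S}}}$ or show --- via the obstruction of Theorem~\ref{theoremNOcompleteontic}, which tells us exactly when the relevant two-element set fails to be admissible --- that the configuration is one in which some colinearity among the $\sigma_i$ already holds, so that the degenerate argument applies. Verifying that these two alternatives exhaust all patterns is the crux of the proof.
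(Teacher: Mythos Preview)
Your degenerate-case handling and the reduction step (using Lemma~\ref{nondegenerateallsigmapure} and Theorems~\ref{zetalambda}/\ref{lambdawr} to force the $\sigma_i$ into $\overline{\mathfrak{S}}{}^{{}^{pure}}$ with $N-2$ matching coordinates) are essentially the same as the paper's. The divergence is in the main step, and here your sketch stops short of the key idea.

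You propose to finish by a bipartite coordinate case-analysis, reading off $\mu$ and $\chi$ from the explicit $cl_c$ formulas of Subsection~\ref{subsectionpreliminaryNdim}, and you concede this is ``the crux'' without carrying it out. The paper avoids this case split entirely by an abstract construction using the star operation: one chooses any $\xi\in\overline{\mathfrak{S}}{}^{{}^{pure}}$ with $\xi\sqsupseteq(\sigma_1\sqcap\sigma_3)^\star$, $\xi\not\sqsupseteq(\sigma_2\sqcap\sigma_4)^\star$, and the correct coordinate-matching on $I$, and then \emph{defines} the witness as $\upsilon:=\xi^\star\sqcup_{{}_{\mathfrak{S}}}(\sigma_2\sqcap\sigma_4)$. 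That $\upsilon$ covers $(\sigma_2\sqcap\sigma_4)$ is immediate from~(\ref{thirdcoveringpropertySbari})--(\ref{thirdcoveringpropertySbarii}); the real work is to show $\upsilon\sqsupseteq(\sigma_1\sqcap\sigma_3)$, which the paper does by introducing an auxiliary $\omega\neq\xi$ above $(\sigma_1\sqcap\sigma_3)^\star$, forming $\nu:=\sigma_2^\star\sqcup(\xi\sqcap\omega)$, and chasing the resulting inequalities through $\Theta^{\overline{\mathfrak{S}}}(\nu)$ and~(\ref{thirdcoveringpropertySbariii}) until one obtains $\omega^\star\sqsubseteq\upsilon$ for every such $\omega$, hence $(\sigma_1\sqcap\sigma_3)\sqsubseteq\upsilon$.

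Your case-analysis route is not wrong in principle, but (i) it is left entirely unexecuted, and (ii) the appeal to Theorem~\ref{theoremNOcompleteontic} goes in the wrong direction: that theorem obstructs admissibility, whereas here you must \emph{establish} that the candidate $\mu$ exists in ${\mathfrak{G}}^{\check{\mathfrak{S}}}$. The paper's $\xi^\star\sqcup(\cdot)$ construction is the missing ingredient that both guarantees existence and sidesteps the casework.
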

\begin{proof}
Let us first note that if $(\sigma_1\sqcap_{{}_{\overline{ \mathfrak{S}}}}\sigma_2)=(\sigma_3\sqcap_{{}_{\overline{ \mathfrak{S}}}}\sigma_4)$ then we have also $(\sigma_1\sqcap_{{}_{\overline{ \mathfrak{S}}}}\sigma_3)=(\sigma_2\sqcap_{{}_{\overline{ \mathfrak{S}}}}\sigma_4)$ and also $(\sigma_1\sqcap_{{}_{\overline{ \mathfrak{S}}}}\sigma_4)=(\sigma_2\sqcap_{{}_{\overline{ \mathfrak{S}}}}\sigma_3)$ and then the proof of (\ref{VeblenYoung1}) is trivial. We will then assume $(\sigma_1\sqcap_{{}_{\overline{ \mathfrak{S}}}}\sigma_2)\not=(\sigma_3\sqcap_{{}_{\overline{ \mathfrak{S}}}}\sigma_4)$ and $(\sigma_1\sqcap_{{}_{\overline{ \mathfrak{S}}}}\sigma_3)\not=(\sigma_2\sqcap_{{}_{\overline{ \mathfrak{S}}}}\sigma_4)$ and $(\sigma_1\sqcap_{{}_{\overline{ \mathfrak{S}}}}\sigma_4)\not=(\sigma_2\sqcap_{{}_{\overline{ \mathfrak{S}}}}\sigma_3)$.\\
We will also assume $\lambda\not=\sigma_1,\lambda\not=\sigma_2,\lambda\not=\sigma_3,\lambda\not=\sigma_4$. Indeed, if we have for example $\lambda=\sigma_1$ then it suffices to take $\mu:=\sigma_4$ and $\chi:=\sigma_3$ to satisfy the announced properties.\\
We will then assume the existence of $\lambda\in U\in { \mathfrak{E}}^{\check{\mathfrak{S}}}$ such that ${\mathfrak{r}}^U(\lambda,\sigma_1,\sigma_2)\;\textit{\rm and}\;{\mathfrak{r}}^U(\lambda,\sigma_3,\sigma_4)$. \\
The assumption ${\mathfrak{r}}^U(\lambda,\sigma_1,\sigma_2)\;\textit{\rm and}\;{\mathfrak{r}}^U(\lambda,\sigma_3,\sigma_4)$ is equivalent to 
\begin{eqnarray}
(\sigma_1\sqcap_{{}_{\overline{ \mathfrak{S}}}}\sigma_2)\sqcoversubset_{{}_{{ \mathfrak{S}}}}\lambda, && (\sigma_3\sqcap_{{}_{\overline{ \mathfrak{S}}}}\sigma_4)\sqcoversubset_{{}_{{ \mathfrak{S}}}}\lambda.\label{reformassumption}
\end{eqnarray}
We have then, using (\ref{thirdcoveringpropertySbariii}) (if $\lambda\in  ({\check{\mathfrak{S}}}\smallsetminus \overline{\mathfrak{S}})$) or using (\ref{secondcoveringpropertySbar}) (if $\lambda\in \overline{\mathfrak{S}}{}^{{}^{pure}}$), the properties 
\begin{eqnarray}
(\sigma_1\sqcap_{{}_{\overline{ \mathfrak{S}}}}\sigma_2\sqcap_{{}_{\overline{ \mathfrak{S}}}}\sigma_3\sqcap_{{}_{\overline{ \mathfrak{S}}}}\sigma_4)\sqcoversubset_{{}_{\overline{ \mathfrak{S}}}} (\sigma_1\sqcap_{{}_{\overline{ \mathfrak{S}}}}\sigma_2),(\sigma_3\sqcap_{{}_{\overline{ \mathfrak{S}}}}\sigma_4)\label{4=3}
\end{eqnarray}
We know from Theorem \ref{zetalambda} and Theorem \ref{lambdawr} that
\begin{eqnarray}
&&\hspace{-2cm}\exists I\subseteq\{1,\cdots,N\}, Card(I)= N-2\;\vert\; \forall i\in I, \forall \alpha,\beta\in \{\sigma_1,\sigma_2,\sigma_3,\sigma_4\},\;\;\zeta^{{}^{{\mathfrak{S}}_{A_1}\cdots{\mathfrak{S}}_{A_N}}}_{(i)}(\alpha) = \zeta^{{}^{{\mathfrak{S}}_{A_1}\cdots{\mathfrak{S}}_{A_N}}}_{(i)}(\beta).\nonumber\\
&&\label{conditionzeta}
\end{eqnarray}
Having noted the property $(\sigma_1\sqcap_{{}_{\overline{ \mathfrak{S}}}}\sigma_3)\not=(\sigma_2\sqcap_{{}_{\overline{ \mathfrak{S}}}}\sigma_4)$,  we will consider any $\xi\in \overline{\mathfrak{S}}{}^{{}^{pure}}$ such that 
\begin{eqnarray}
\hspace{-1cm}\xi\sqsupseteq_{{}_{\overline{ \mathfrak{S}}}} (\sigma_1\sqcap_{{}_{\overline{ \mathfrak{S}}}}\sigma_3)^\star, && \xi\not\sqsupseteq_{{}_{\overline{ \mathfrak{S}}}} (\sigma_2\sqcap_{{}_{\overline{ \mathfrak{S}}}}\sigma_4)^\star,\;\;\;\;\;\;\forall i\in I, \zeta^{{}^{{\mathfrak{S}}_{A_1}\cdots{\mathfrak{S}}_{A_N}}}_{(i)}(\xi) = \zeta^{{}^{{\mathfrak{S}}_{A_1}\cdots{\mathfrak{S}}_{A_N}}}_{(i)}(\sigma_1)\;\;\;\;\;\;\;\;\;\;\;\;\;\;\;\;\;\;
\label{assumptionxi}
\end{eqnarray}
(we note that $\sigma_1\sqcap_{{}_{\overline{ \mathfrak{S}}}}\sigma_3$ and $\sigma_2\sqcap_{{}_{\overline{ \mathfrak{S}}}}\sigma_4$ are different from $\bot_{{}_{\overline{ \mathfrak{S}}}}$).\\
From (\ref{thirdcoveringpropertySbari}), and using $\xi\not\sqsupseteq_{{}_{\overline{ \mathfrak{S}}}} (\sigma_2\sqcap_{{}_{\overline{ \mathfrak{S}}}}\sigma_4)^\star$ and property (\ref{conditionzeta}), we know that there exists $\upsilon\in {\mathfrak{G}}^{\check{\mathfrak{S}}}$ such that
\begin{eqnarray}
\upsilon := \xi^\star \sqcup_{{}_{{ \mathfrak{S}}}}(\sigma_2\sqcap_{{}_{\overline{ \mathfrak{S}}}}\sigma_4)
\end{eqnarray}
From (\ref{thirdcoveringpropertySbarii}), we know that $\upsilon\sqcoversupset_{{}_{{ \mathfrak{S}}}} (\sigma_2\sqcap_{{}_{\overline{ \mathfrak{S}}}}\sigma_4)$ and then ${\mathfrak{r}}^U(\upsilon,\sigma_2,\sigma_4)$. 
We intent to show that we have also ${\mathfrak{r}}^U(\upsilon,\sigma_1,\sigma_3)$. \\
We then consider any $\omega\in \overline{\mathfrak{S}}{}^{{}^{pure}}$ such that 
\begin{eqnarray}
\omega\sqsupseteq_{{}_{\overline{ \mathfrak{S}}}} (\sigma_1\sqcap_{{}_{\overline{ \mathfrak{S}}}}\sigma_3)^\star, && \omega\not=\xi,\;\;\;\;\;\;\forall i\in I, \zeta^{{}^{{\mathfrak{S}}_{A_1}\cdots{\mathfrak{S}}_{A_N}}}_{(i)}(\omega) = \zeta^{{}^{{\mathfrak{S}}_{A_1}\cdots{\mathfrak{S}}_{A_N}}}_{(i)}(\xi)\;\;\;\;\;\;\;\;\;\;\;\;\;\;\;\;\;\;\label{omegazeta}
\end{eqnarray}
and we intent to show that $\omega^\star\sqsubseteq_{{}_{{ \mathfrak{S}}}}\upsilon$ (we note that  
$(\sigma_1\sqcap_{{}_{\overline{ \mathfrak{S}}}}\sigma_3)^\star$ is not in $\overline{\mathfrak{S}}{}^{{}^{pure}}$ and then, such an $\omega$ different from $\xi$ always exists).\\
We first note that $(\xi\sqcap_{{}_{\overline{ \mathfrak{S}}}}\omega)^\star\not\sqsubseteq_{{}_{\overline{ \mathfrak{S}}}}\sigma_2$. Indeed, if we had $(\xi\sqcap_{{}_{\overline{ \mathfrak{S}}}}\omega)^\star\sqsubseteq_{{}_{\overline{ \mathfrak{S}}}}\sigma_2$, and using the fact that $(\xi\sqcap_{{}_{\overline{ \mathfrak{S}}}}\omega)^\star\sqsubseteq_{{}_{\overline{ \mathfrak{S}}}}(\sigma_1\sqcap_{{}_{\overline{ \mathfrak{S}}}}\sigma_3)$, we would have $(\xi\sqcap_{{}_{\overline{ \mathfrak{S}}}}\omega)^\star\sqsubseteq_{{}_{\overline{ \mathfrak{S}}}}(\sigma_1\sqcap_{{}_{\overline{ \mathfrak{S}}}}\sigma_3\sqcap_{{}_{\overline{ \mathfrak{S}}}}\sigma_2)$. But now,  using (\ref{4=3}), we have in particular $\sigma_1\sqcap_{{}_{\overline{ \mathfrak{S}}}}\sigma_3\sqcap_{{}_{\overline{ \mathfrak{S}}}}\sigma_2=(\sigma_1\sqcap_{{}_{\overline{ \mathfrak{S}}}}\sigma_2\sqcap_{{}_{\overline{ \mathfrak{S}}}}\sigma_3\sqcap_{{}_{\overline{ \mathfrak{S}}}}\sigma_4)$. As a consequence, we would have $(\xi\sqcap_{{}_{\overline{ \mathfrak{S}}}}\omega)^\star\sqsubseteq_{{}_{\overline{ \mathfrak{S}}}}
(\sigma_1\sqcap_{{}_{\overline{ \mathfrak{S}}}}\sigma_2\sqcap_{{}_{\overline{ \mathfrak{S}}}}\sigma_3\sqcap_{{}_{\overline{ \mathfrak{S}}}}\sigma_4) \sqsubseteq_{{}_{\overline{ \mathfrak{S}}}} (\sigma_2\sqcap_{{}_{\overline{ \mathfrak{S}}}}\sigma_4)$, and then $\xi^\star\sqsubseteq_{{}_{\overline{ \mathfrak{S}}}}
 (\sigma_2\sqcap_{{}_{\overline{ \mathfrak{S}}}}\sigma_4)$ which is false by assumption (\ref{assumptionxi}). \\
Then, from (\ref{thirdcoveringpropertySbari}), and using the fact that $\sigma_2$ is in $\overline{ \mathfrak{S}}{}^{{}^{pure}}$ (consequence of Lemma \ref{nondegenerateallsigmapure}) and the fact that $(\xi\sqcap_{{}_{\overline{ \mathfrak{S}}}}\omega)^\star\not\sqsubseteq_{{}_{\overline{ \mathfrak{S}}}}\sigma_2$ and also the property (\ref{omegazeta}), we know that there exists $\nu\in {\mathfrak{G}}^{\check{\mathfrak{S}}}$ such that 
\begin{eqnarray}
\nu := \sigma_2^\star \sqcup_{{}_{{ \mathfrak{S}}}}(\xi\sqcap_{{}_{\overline{ \mathfrak{S}}}}\omega).\label{defnu}
\end{eqnarray}
Using $(\xi\sqcap_{{}_{\overline{ \mathfrak{S}}}}\omega)\sqsupseteq_{{}_{\overline{ \mathfrak{S}}}}(\sigma_1\sqcap_{{}_{\overline{ \mathfrak{S}}}}\sigma_3)^\star$, we deduce as before
\begin{eqnarray}
\nu \sqsupseteq_{{}_{\overline{ \mathfrak{S}}}} (\sigma_2\sqcap_{{}_{\overline{ \mathfrak{S}}}}\sigma_1\sqcap_{{}_{\overline{ \mathfrak{S}}}}\sigma_3)^\star=(\sigma_1\sqcap_{{}_{\overline{ \mathfrak{S}}}}\sigma_2\sqcap_{{}_{\overline{ \mathfrak{S}}}}\sigma_3\sqcap_{{}_{\overline{ \mathfrak{S}}}}\sigma_4)^\star
\end{eqnarray}
(we note using, for example,  the property (\ref{thirdcoveringpropertySbariii}) and (\ref{reformassumption}) that $(\sigma_1\sqcap_{{}_{\overline{ \mathfrak{S}}}}\sigma_2\sqcap_{{}_{\overline{ \mathfrak{S}}}}\sigma_3\sqcap_{{}_{\overline{ \mathfrak{S}}}}\sigma_4)$ is different from $\bot_{{}_{\overline{\mathfrak{S}}}}$).\\
We then deduce
\begin{eqnarray}
\nu \sqsupseteq_{{}_{\overline{ \mathfrak{S}}}} (\sigma_2\sqcap_{{}_{\overline{ \mathfrak{S}}}}\sigma_4)^\star.\label{propnu}
\end{eqnarray}
From this inequality we deduce that
\begin{eqnarray}
\exists \alpha\in \Theta^{\overline{ \mathfrak{S}}}(\nu) &\vert & \alpha \sqsupseteq_{{}_{\overline{ \mathfrak{S}}}} (\sigma_2\sqcap_{{}_{\overline{ \mathfrak{S}}}}\sigma_4)^\star.
\end{eqnarray}
Let us fix one of these $\alpha$.\\
We recall that, as long as $\xi\in \overline{\mathfrak{S}}{}^{{}^{pure}}$ we have $(\xi \sqcap_{{}_{{ \mathfrak{S}}}}\nu)=Max(\{\,\xi\sqcap_{{}_{\overline{ \mathfrak{S}}}} \kappa\;\vert\; \kappa\in  \Theta^{\overline{ \mathfrak{S}}}(\nu)\,\})$. Moreover, we have $(\xi \sqcap_{{}_{\overline{ \mathfrak{S}}}}\omega)\in \Theta^{\overline{ \mathfrak{S}}}(\nu)$. Then, using (\ref{thirdcoveringpropertySbarii}) %and (\ref{coveringpropertySbar}) 
we deduce as well 
\begin{eqnarray}
\xi \sqcap_{{}_{\overline{ \mathfrak{S}}}}\omega &=& \xi \sqcap_{{}_{{ \mathfrak{S}}}}\nu.
\end{eqnarray}
As a result, we obtain for the chosen $\alpha$
\begin{eqnarray}
\xi \sqcap_{{}_{\overline{ \mathfrak{S}}}}\omega &\sqsupseteq_{{}_{\overline{ \mathfrak{S}}}}& \xi \sqcap_{{}_{{ \mathfrak{S}}}}\alpha.\label{eqinterm}
\end{eqnarray}
Now, since $\upsilon \sqsupseteq_{{}_{{ \mathfrak{S}}}}\xi^\star$ and $\upsilon \sqsupseteq_{{}_{{ \mathfrak{S}}}}(\sigma_2\sqcap_{{}_{\overline{ \mathfrak{S}}}}\sigma_4)$ and $ \alpha \sqsupseteq_{{}_{\overline{ \mathfrak{S}}}} (\sigma_2\sqcap_{{}_{\overline{ \mathfrak{S}}}}\sigma_4)^\star$, we deduce that 
\begin{eqnarray}
&& \upsilon \sqsupseteq_{{}_{{ \mathfrak{S}}}} (\xi^\star \sqcup_{{}_{{ \mathfrak{S}}}} \alpha^\star).
\end{eqnarray}
and then using (\ref{eqinterm})
\begin{eqnarray}
&& \upsilon \sqsupseteq_{{}_{{ \mathfrak{S}}}} (\xi^\star\sqcup_{{}_{{ \mathfrak{S}}}} \omega^\star)\sqsupseteq_{{}_{{ \mathfrak{S}}}} \omega^\star
\end{eqnarray}
We have then obtained the announced intermediary result : for any $\omega\in \overline{\mathfrak{S}}{}^{{}^{pure}}$ such that the conditions (\ref{omegazeta}) are satisfied, we have $\omega^\star\sqsubseteq_{{}_{{ \mathfrak{S}}}}\upsilon$.  We note that we have also by assumption $\xi\sqsupseteq_{{}_{\overline{ \mathfrak{S}}}} (\sigma_1\sqcap_{{}_{\overline{ \mathfrak{S}}}}\sigma_3)^\star$ and $\xi^\star\sqsubseteq_{{}_{{ \mathfrak{S}}}}\upsilon$.  \\
Hence, we can conclude 
\begin{eqnarray}
&& \upsilon \sqsupseteq_{{}_{{ \mathfrak{S}}}}(\sigma_1\sqcap_{{}_{\overline{ \mathfrak{S}}}}\sigma_3)\label{conclusionupsilon}
\end{eqnarray}
i.e. explicitly ${\mathfrak{r}}^{U'}(\upsilon,\sigma_1,\sigma_3)$.  Obviously, we have $\upsilon\asymp \sigma_1,\upsilon\asymp \sigma_2,\upsilon\asymp \sigma_3,\upsilon\asymp \sigma_4$ and then the existence of the consistent subset $U'$ does not pose any problem.\\
As a final conclusion, we obtain the existence of $\mu:=\upsilon\in {\mathfrak{G}}^{\check{\mathfrak{S}}}$ such that ${\mathfrak{r}}^{U'}(\mu,\sigma_1,\sigma_3)$ and ${\mathfrak{r}}^{U'}(\mu,\sigma_2,\sigma_4)$. \\
By exchanging the roles of $\sigma_3$ and $\sigma_4$ in our construction, we obtain also the existence of $\chi\in {\mathfrak{G}}^{\check{\mathfrak{S}}}$ such that ${\mathfrak{r}}^{U''}(\chi,\sigma_1,\sigma_4)$ and ${\mathfrak{r}}^{U''}(\chi,\sigma_2,\sigma_3)$. 
\end{proof}

Let us now summarize what we have found so far.

\begin{theorem}
${ \mathfrak{E}}^{\check{\mathfrak{S}}}$ is equipped with a structure of {\em projective sheaf}.
In other words, apart from the sheaf condition (\ref{VeblenYoung0}), we have proved the Veblen-Young's geometric axioms of projective spaces (\ref{VeblenYoung3})(\ref{VeblenYoung2})(\ref{VeblenYoung1}).
\end{theorem}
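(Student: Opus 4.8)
The plan is to recognize that this theorem is a bookkeeping statement: it asserts that the four ingredients needed to call $({ \mathfrak{G}}^{\check{\mathfrak{S}}},{ \mathfrak{E}}^{\check{\mathfrak{S}}},({\mathfrak{r}}^U)_U)$ a \emph{projective sheaf} have each already been established as a separate lemma in this subsection. So the proof is an assembly. First I would pin down the working definition in force here, namely: a family ${ \mathfrak{E}}^{\check{\mathfrak{S}}}$ of maximal consistent subsets covering ${ \mathfrak{G}}^{\check{\mathfrak{S}}}$, equipped on each member $U$ with a ternary colinearity relation ${\mathfrak{r}}^U$ that (i) is symmetric under all permutations of its arguments, (ii) obeys the Veblen--Young incidence axioms internally, and (iii) glues coherently on overlaps of consistent subsets. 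Then the proof reduces to citing the corresponding results.

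Concretely, I would invoke, in order: Lemma \ref{lemmaproj0} for the sheaf/gluing condition (\ref{VeblenYoung0}); the lemma establishing that ${\mathfrak{r}}^U$ is symmetric under permutations of its three arguments; Lemma \ref{lemmaproj1} for the degenerate reflexivity axiom (\ref{VeblenYoung3}); Lemma \ref{lemmaproj2} for the second geometric axiom (\ref{VeblenYoung2}); and Lemma \ref{lemmaproj3} for the third, Pasch-type axiom (\ref{VeblenYoung1}). All of these were proved for the iterated tensor product $\overline{\mathfrak{S}}={{{ \overline{S}}}}{}_{A_1\cdots A_N}$ of quantum-bit spaces and its ontic completion ${\mathfrak{S}}$, under exactly the hypotheses that remain in force, so nothing new needs to be derived. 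The one point I would be careful to check is that the consistent subsets appearing as witnesses in the three axioms can always be enlarged into members of ${ \mathfrak{E}}^{\check{\mathfrak{S}}}$ — but this maximality passage is already absorbed inside Lemma \ref{lemmaproj2} and Lemma \ref{lemmaproj3}, and Lemma \ref{typoU} controls the possible shapes of such subsets.

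I expect no genuine obstacle in the assembly itself; the real difficulty is upstream and concentrated in Lemma \ref{lemmaproj3}, which is the load-bearing result. Its proof leans on the covering identities (\ref{thirdcoveringpropertySbari})--(\ref{thirdcoveringpropertySbarvi}), on the explicit description of $\Theta^{\overline{\mathfrak{S}}}(\chi)$ for $\chi=\mu^\star\sqcup_{{}_{{ \mathfrak{S}}}}\gamma$, and on Lemma \ref{nondegenerateallsigmapure} (which forces the relevant states to lie in $\overline{\mathfrak{S}}{}^{{}^{pure}}$, whence the $\sqcap_{{}_{\overline{\mathfrak{S}}}}$-arguments used in the third axiom make sense). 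In writing the summary I would simply flag that the third Veblen--Young axiom is the hard one and that the sheaf condition and the first two axioms are immediate, so the theorem follows by collecting Lemma \ref{lemmaproj0}, Lemma \ref{lemmaproj1}, Lemma \ref{lemmaproj2} and Lemma \ref{lemmaproj3} together with the symmetry of ${\mathfrak{r}}^U$.
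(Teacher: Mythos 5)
Your proposal is correct and coincides with the paper's own proof, which simply collects Lemma \ref{lemmaproj0}, Lemma \ref{lemmaproj1}, Lemma \ref{lemmaproj2} and Lemma \ref{lemmaproj3} (all the real work having been done upstream, notably in Lemma \ref{lemmaproj3}).
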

\begin{proof}
This result summarizes Lemma \ref{lemmaproj0}, Lemma \ref{lemmaproj1}, Lemma \ref{lemmaproj2}, Lemma \ref{lemmaproj3}.
\end{proof}

\begin{definition}
For any pair $\sigma_1,\sigma_2\in { \mathfrak{G}}^{\check{\mathfrak{S}}}$, we define the line $\ell_{(\sigma_1\sigma_2)}$ to be the set of elements $\sigma_3$ in ${ \mathfrak{G}}^{\check{\mathfrak{S}}}$ such that $\sigma_3=\sigma_1$ or $\sigma_3=\sigma_2$ or $\sigma_3\sqcoversupset_{{}_{{\mathfrak{S}}}}(\sigma_1\sqcap_{{}_{{\mathfrak{S}}}}\sigma_2)$.
\end{definition}
\begin{definition}
For any triple $\sigma_1,\sigma_2,\sigma_3\in { \mathfrak{G}}^{\check{\mathfrak{S}}}$ distinct, such that $\sigma_i\notin \ell_{(\sigma_j\sigma_k)}$ for any $\{i,j,k\}=\{1,2,3\}$, we define the plane $\wp_{(\sigma_1,\sigma_2,\sigma_3)}$ to be the set of elements $\sigma_4$ in ${ \mathfrak{G}}^{\check{\mathfrak{S}}}$ such that there exists $i,j,k$ with $\{i,j,k\}=\{1,2,3\}$ and 
$\lambda\in { \mathfrak{G}}^{\check{\mathfrak{S}}}$ and $\lambda\in \ell_{(\sigma_i\sigma_4)}$ and $\lambda\in \ell_{(\sigma_j\sigma_k)}$. \\
The set of planes on ${ \mathfrak{G}}^{\check{\mathfrak{S}}}$ will be denoted $\wp^{\check{\mathfrak{S}}}$.\\
Any element of $\wp^{\check{\mathfrak{S}}}$ has a structure of a projective sheaf.
\end{definition}
\begin{remark}
Any plane is stable under the application of the third Veblen-Young's property.  In other words,  if there exists $U\in { \mathfrak{E}}^{\check{\mathfrak{S}}}$ and $\lambda, \sigma_1,\sigma_2,\sigma_3,\sigma_4\in { \mathfrak{G}}^{\check{\mathfrak{S}}}$ such that $\lambda,\sigma_1,\sigma_2,\sigma_3,\sigma_4\in U$  and such that ${\mathfrak{r}}^U(\lambda,\sigma_1,\sigma_2)$ and ${\mathfrak{r}}^U(\lambda,\sigma_3,\sigma_4)$ and $\forall \mu,\nu,\rho\in \{\sigma_1,\sigma_2,\sigma_3,\sigma_4\}, \;\neg\; {\mathfrak{r}}^U(\mu,\nu,\rho)$. Then,   $\lambda,\sigma_1,\sigma_2,\sigma_3,\sigma_4$ are in a same plane $\wp$. Moreover, if we introduce $U'\in { \mathfrak{E}}^{\check{\mathfrak{S}}}$ and the element $\mu\in U'$  such that ${\mathfrak{r}}^{U'}(\mu,\sigma_1,\sigma_3)$ and ${\mathfrak{r}}^{U'}(\mu,\sigma_2,\sigma_4)$, and if we introduce  $U''\in { \mathfrak{E}}^{\check{\mathfrak{S}}}$
 and $\chi\in U''$ such that ${\mathfrak{r}}^{U''}(\chi,\sigma_2,\sigma_3)$ and ${\mathfrak{r}}^{U''}(\chi,\sigma_1,\sigma_4)\,)$ we have necessarily $\mu,\chi\in \wp$.
 \end{remark}

Let us now consider the properties of the orthogonality relation induced on ${ \mathfrak{G}}^{\check{\mathfrak{S}}}$ from the orthogonality relation defined on ${\mathfrak{S}}$ (see subsection \ref{firstresultsrealstructures}).\\

Even though we have found a very interesting projective structure on ${ \mathfrak{E}}^{\check{\mathfrak{S}}}$, we must now proceed with caution.  Indeed, the problem with ${ \mathfrak{G}}^{\check{\mathfrak{S}}}$ is due to the following simple fact (see Remark \ref{remarkstarredplanes})
\begin{eqnarray}
\hspace{-1cm}\forall U\in { \mathfrak{E}}^{\check{\mathfrak{S}}},\forall \chi\in { \mathfrak{G}}^{\check{\mathfrak{S}}}\cap U, \forall \sigma\in \overline{\mathfrak{S}}{}^{{}^{pure}}\cap U,&&(\,\exists \lambda\in U\;\vert\; {\mathfrak{r}}^{U}(\chi,\sigma,\lambda)\;\textit{\rm and}\; \chi\underline{\perp}\lambda\,)\Rightarrow (\,\chi\in { \mathfrak{G}}^{\widecheck{\mathfrak{S}}}\,)\;\;\;\;\;\;\;\;\;\;\;\;\;
\end{eqnarray}  

%\begin{eqnarray*}
%\left\{ (\alpha''\widetilde{\otimes}\beta'')\sqcap_{{}_{{{{ \overline{\mathfrak{S}}}}}}}\!\!\!(\alpha'\widetilde{\otimes}\beta')\,,\,(\alpha'' \widetilde{\otimes}\beta^\star)\sqcap_{{}_{{{{ \overline{\mathfrak{S}}}}}}} (\alpha^\star\widetilde{\otimes} \beta')\,,\,(\alpha' \widetilde{\otimes}\beta^\star)\sqcap_{{}_{{{{ \overline{\mathfrak{S}}}}}}} (\alpha^\star\widetilde{\otimes} \beta'')\right\}
%\end{eqnarray*}

%\begin{eqnarray*}
% ((\alpha''{}^\star\widetilde{\otimes}\bot)\sqcap_{{}_{{{{ \overline{\mathfrak{S}}}}}}}\!\!\!(\bot\widetilde{\otimes}\beta''{}^\star)) \sqcup_{{}_{{{{{\mathfrak{S}}}}}}}\!\!\! ((\bot\widetilde{\otimes}\beta'{}^\star)\sqcap_{{}_{{{{ \overline{\mathfrak{S}}}}}}}\!\!\!(\alpha'{}^\star\widetilde{\otimes}\bot)) \sqcup_{{}_{{{{{\mathfrak{S}}}}}}}\!\!\!((\alpha'' \widetilde{\otimes}\beta^\star)\sqcap_{{}_{{{{ \overline{\mathfrak{S}}}}}}} (\alpha^\star\widetilde{\otimes} \beta'))
%\end{eqnarray*}

%\begin{eqnarray*}
%&& ((\alpha''{}^\star\widetilde{\otimes}\bot)\sqcap_{{}_{{{{ \overline{\mathfrak{S}}}}}}}\!\!\!(\bot\widetilde{\otimes}\beta''{}^\star))\sqcup_{{}_{{{{{\mathfrak{S}}}}}}}\!\!\!\\
%&&\left \{
%((\alpha'{}^\star \widetilde{\otimes}\beta^\star)\sqcap_{{}_{{{{ \overline{\mathfrak{S}}}}}}} (\alpha{}^\star\widetilde{\otimes} \beta'{}^\star)),((\alpha'' \widetilde{\otimes}\beta'{}^\star)\sqcap_{{}_{{{{ \overline{\mathfrak{S}}}}}}} (\alpha'{}^\star\widetilde{\otimes} \beta')), ((\alpha'' \widetilde{\otimes}\beta^\star)\sqcap_{{}_{{{{ \overline{\mathfrak{S}}}}}}} (\alpha^\star\widetilde{\otimes} \beta')) \right\}
%\end{eqnarray*}

The following sequence of results is the consequence of our attempt to deal with this annoying problem and with the necessity to restrict ourselves to ${ \mathfrak{G}}^{\widecheck{\mathfrak{S}}}$.

\begin{definition}\label{completenessU}
A consistent subset of states $U\subseteq { \mathfrak{G}}^{\check{\mathfrak{S}}}$ is said to be {\em orthogonally complete} iff
\begin{eqnarray}
&&\hspace{-2.2cm}\forall \sigma_1,\sigma_2,\sigma_3\in U,\;\textit{\rm distinct},\;\; {\mathfrak{r}}^U(\sigma_1,\sigma_2,\sigma_3) \;\Rightarrow \;(\,\sigma_1\underline{\perp}\sigma_2\;\textit{\rm or}\;\sigma_1\underline{\perp}\sigma_3\;\textit{\rm or}\;\sigma_2\underline{\perp}\sigma_3\,).\;\;\;\;\;\;\;\;\;\;\;\;\;\;\;\;\label{orthogcompletecondition1}\\
&&\hspace{-2.2cm}\forall \sigma_1,\sigma_2,\sigma_3,\sigma_4\in U\;\vert\; (\,\exists \lambda\in U, \;{\mathfrak{r}}^U(\lambda,\sigma_1,\sigma_2),\;{\mathfrak{r}}^U(\lambda,\sigma_3,\sigma_4)\;\textit{\rm and}\; \forall\mu,\nu,\rho\in \{\sigma_1,\sigma_2,\sigma_3,\sigma_4\}, \;\neg\; {\mathfrak{r}}^U(\mu,\nu,\rho)\,)\nonumber\\
&&\hspace{2cm}\exists \alpha,\beta,\gamma\in \{\sigma_1,\sigma_2,\sigma_3,\sigma_4\},\;%\alpha\sqsupseteq_{{}_{\overline{ \mathfrak{S}}}}(\beta\sqcap_{{}_{\overline{ \mathfrak{S}}}}\gamma)^\star
\alpha\underline{\perp}\beta\;\;\textit{\rm and}\;\; \alpha\underline{\perp}\gamma
.\;\;\;\;\;\;\;\;\;\;\;\label{orthogcompletecondition2}
\end{eqnarray}
\end{definition}

\begin{definition}
For any $\sigma_1,\sigma_2,\sigma_3\in \overline{ \mathfrak{S}}{}^{{}{pure}}$ satisfying the following conditions
\begin{enumerate}
\item there exists $I\subseteq\{1,\cdots,N\}, Card(I)= N-2\;\vert\;$ $\forall m\in I,\;\;\zeta^{{}^{{\mathfrak{S}}_{A_1}\cdots{\mathfrak{S}}_{A_N}}}_{(m)}(\sigma_1) = \zeta^{{}^{{\mathfrak{S}}_{A_1}\cdots{\mathfrak{S}}_{A_N}}}_{(m)}(\sigma_2)=\zeta^{{}^{{\mathfrak{S}}_{A_1}\cdots{\mathfrak{S}}_{A_N}}}_{(m)}(\sigma_3)$, 
\item if we denote $j,k\in \{1,\cdots,N\}\smallsetminus I$ with $j<k$,  there exists $\alpha\in \overline{\mathfrak{S}}{}_{A_j}^{{}^{pure}}$ and $\beta\in \overline{\mathfrak{S}}{}_{A_k}^{{}^{pure}}$ such that
$\Upsilon^{(j,k)}_{\sigma_1}=\alpha\widetilde{\otimes}\beta$, $\Upsilon^{(j,k)}_{\sigma_2}=\alpha^\star\widetilde{\otimes}\beta$, $\Upsilon^{(j,k)}_{\sigma_3}=\alpha\widetilde{\otimes}\beta^\star$.
\end{enumerate}
we say that the plane $\wp_{(\sigma_1,\sigma_2,\sigma_3)}$ is a {\em starred plane}. The set of starred planes on ${ \mathfrak{G}}^{\check{\mathfrak{S}}}$ will be denoted $\wp_\star^{\check{\mathfrak{S}}}$.
\end{definition}

Let us now analyze the structure of orthogonally complete consistent subsets in the following lemmas.

\begin{lemma}\label{natureorthocompleteU}
Let us consider a consistent subset of states $U\in { \mathfrak{E}}^{\check{\mathfrak{S}}}$ and let us assume that there exists $\chi\in ({\check{\mathfrak{S}}}\smallsetminus \overline{ \mathfrak{S}})$ and that the consistent subset $U$ is of {\em consistent type 2}.\\ 
Let us now assume that $U$ is orthogonally complete. Then, we have necessarily $\chi\in ({\widecheck{\mathfrak{S}}}\smallsetminus \overline{ \mathfrak{S}})$.\\
Conversely,  let us assume that $\chi\in ({\widecheck{\mathfrak{S}}}\smallsetminus \overline{ \mathfrak{S}})$ and that $U$ is maximal,  then we have
\begin{eqnarray}
&&\hspace{-0.5cm}U =  \{\chi\}\cup\{\,\varphi_\alpha\;\vert\; \alpha\in \Theta^{\overline{ \mathfrak{S}}}(\chi)\,\}\cup\{\,\psi_\alpha\;\vert\; \alpha\in \Theta^{\overline{ \mathfrak{S}}}(\chi)\,\}\label{orthocompletestructure0}\\
&& \forall \alpha\in \Theta^{\overline{ \mathfrak{S}}}(\chi),\;\varphi_\alpha,\psi_\alpha\in\overline{ \mathfrak{S}}{}^{{}^{pure}}  \;\textit{\rm and}\; \varphi_\alpha,\psi_\alpha \sqcoversupset_{{}_{\overline{ \mathfrak{S}}}}\alpha\;\;\textit{\rm and}\;\;\varphi_\alpha\not=\psi_\alpha\label{orthocompletestructure1}\;\;\;\;\;\;\;\;\;\;\;\;\;\;\;\;\;\;\;\;\;\;\;\;\\
&& \existunique\, \gamma\in \Theta^{\overline{ \mathfrak{S}}}(\chi)\;\vert\; \varphi_\gamma \underline{\perp} \chi,\;\;\varphi_\gamma \not\!\!\!\underline{\perp} \psi_\gamma,\;\;\forall \alpha\in \Theta^{\overline{ \mathfrak{S}}}(\chi)\smallsetminus \{\gamma\},\; \varphi_\gamma \underline{\perp} \varphi_\alpha,\;\;\varphi_\gamma \underline{\perp} \psi_\alpha,\;\;\;\;\;\;\;\;\;\;\;\;\;\;\;\;\;\;\;\label{orthocompletestructure2}\\
&&\forall \alpha\in \Theta^{\overline{ \mathfrak{S}}}(\chi)\smallsetminus \{\gamma\}, \varphi_{\alpha}\underline{\perp}\psi_{\alpha},\;\;\varphi_{\alpha} \not\!\!\!\underline{\perp}\chi,\;\;\psi_{\alpha} \not\!\!\!\underline{\perp}\chi,\label{orthocompletestructure3}\\
&&\forall \alpha,\beta\in \Theta^{\overline{ \mathfrak{S}}}(\chi)\smallsetminus \{\gamma\},\alpha\not=\beta,\; \varphi_{\alpha}\underline{\perp}\varphi_{\beta},\;\;\varphi_{\alpha}\not\!\!\!\underline{\perp}\psi_{\beta},\label{orthocompletestructure4}\\
&&\forall \alpha,\beta\in \Theta^{\overline{ \mathfrak{S}}}(\chi)\smallsetminus \{\gamma\},\alpha\not=\beta,\exists \lambda_{\alpha\beta}\in \overline{ \mathfrak{S}}{}^{{}^{pure}}\;\vert\; 
(\varphi_{\beta}\sqcap_{{}_{\overline{ \mathfrak{S}}}}\psi_{\alpha}), (\psi_{\beta}\sqcap_{{}_{\overline{ \mathfrak{S}}}}\varphi_{\alpha})\sqcoversubset_{{}_{\overline{ \mathfrak{S}}}}\lambda_{\alpha\beta}
\;\;\;\;\;\;\;\;\;\;\;\;\;\;\;\;\;\nonumber\\
&&\hspace{6.5cm}\textit{\rm and}\;\lambda_{\alpha\beta}\underline{\perp}\varphi_{\alpha},\;\;\lambda_{\alpha\beta}\underline{\perp}\varphi_{\beta},\label{orthocompletestructure5}\\
&&\forall \alpha\in \Theta^{\overline{ \mathfrak{S}}}(\chi)\smallsetminus \{\gamma\},\exists \mu_{\alpha\gamma}\in \overline{ \mathfrak{S}}{}^{{}^{pure}}\;\vert\; 
(\varphi_{\gamma}\sqcap_{{}_{\overline{ \mathfrak{S}}}}\varphi_{\alpha}), (\psi_{\gamma}\sqcap_{{}_{\overline{ \mathfrak{S}}}}\psi_{\alpha})\sqcoversubset_{{}_{\overline{ \mathfrak{S}}}}\mu_{\alpha\gamma}
\;\;\;\;\;\;\;\;\;\;\;\;\nonumber\\
&&\hspace{6.5cm}\textit{\rm and}\;\mu_{\alpha\gamma}\underline{\perp}\psi_{\alpha},\label{orthocompletestructure6}
\end{eqnarray}
and then necessarily $U$ is orthogonally complete.
\end{lemma}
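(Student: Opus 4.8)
The plan is to prove the two implications separately, leaning throughout on the fine description of the elements of $\check{\mathfrak{S}}\smallsetminus\overline{\mathfrak{S}}$ obtained earlier in this subsection --- for $\chi\in\check{\mathfrak{S}}\smallsetminus\overline{\mathfrak{S}}$ one disposes of the set $\Theta^{\overline{\mathfrak{S}}}(\chi)$ together with its attached covers $(\varphi_\alpha)_{\alpha},(\psi_\alpha)_{\alpha}$ and, when $\chi\in\widecheck{\mathfrak{S}}$, its distinguished element $\gamma$, subject to the relations (\ref{thirdcoveringpropertySbari})--(\ref{thirdcoveringpropertySbarviiabis}) and to Theorem \ref{lambdawr} --- and on the dictionary between $\star$ and $\underline{\perp}$: for a pure state $\rho$ one has $\Theta^{\overline{\mathfrak{S}}}(\rho)=\{\rho\}$, so Lemma \ref{lemmaSorthogonal} gives $\{\rho\}^{\underline{\perp}}=(\uparrow^{{}^{\mathfrak{S}}}\!\!\rho^\star)$, hence for any $\tau\in\mathfrak{S}$ one has $\rho\,\underline{\perp}\,\tau\Leftrightarrow\rho^\star\sqsubseteq_{{}_{\mathfrak{S}}}\tau$.

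\textbf{First implication.} Assume $U$ is of consistent type $2$ and orthogonally complete, with $\chi\in U\cap(\check{\mathfrak{S}}\smallsetminus\overline{\mathfrak{S}})$. By Lemma \ref{typoU} there are $\alpha_1\ne\alpha_2$ in $\Theta^{\overline{\mathfrak{S}}}(\chi)$ and $\sigma_1,\sigma_2\in U\cap\overline{\mathfrak{S}}{}^{{}^{pure}}$ with $\sigma_i\sqcap_{{}_{\overline{\mathfrak{S}}}}\chi=\alpha_i$; since $\sigma_i,\chi\in U$ forces $\sigma_i\asymp\chi$, each $\sigma_i$ is a pure cover of an element of $\Theta^{\overline{\mathfrak{S}}}(\chi)$, and as $\sigma_i\sqcap_{{}_{\overline{\mathfrak{S}}}}\chi=\alpha_i$ this element is $\alpha_i$, so $\sigma_i\in\{\varphi_{\alpha_i},\psi_{\alpha_i}\}$; moreover $\alpha_1\sqcup_{{}_{\mathfrak{S}}}\alpha_2=\chi$ by Lemma \ref{lemmazetacup}. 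The aim is then to manufacture, out of $\chi$, the $\sigma_i$, the remaining covers $\varphi_{\alpha_i},\psi_{\alpha_i}$ and the auxiliary states $\lambda_{\alpha_1\alpha_2},\mu_{\alpha_i\gamma}$ supplied by (\ref{thirdcoveringpropertySbarviia})--(\ref{thirdcoveringpropertySbarviiabis}) --- all consistent with $U$, hence by maximality inside $U$ --- a colinear triple as in (\ref{orthogcompletecondition1}) or a Veblen--Young quadruple as in (\ref{orthogcompletecondition2}), the relevant infima being evaluated through the expansion formula (\ref{developmentetildeordersimplify}) and the relations (\ref{thirdcoveringpropertySbarii})--(\ref{thirdcoveringpropertySbariv}). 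Orthogonal completeness applied to such a configuration produces an orthogonal pair, and the configuration is chosen so that this pair is necessarily of the form $\omega\,\underline{\perp}\,\chi$ with $\omega$ a pure cover of some $\rho\in\Theta^{\overline{\mathfrak{S}}}(\chi)$. Then $\omega^\star\sqsubseteq_{{}_{\mathfrak{S}}}\chi$ by the dictionary, and a final computation --- again using (\ref{thirdcoveringpropertySbari}), (\ref{thirdcoveringpropertySbarii}) and the explicit shape of $\Theta^{\overline{\mathfrak{S}}}(\chi)$ --- shows $\rho=\omega\sqcap_{{}_{\overline{\mathfrak{S}}}}\phi$ for $\phi$ the other cover of $\rho$ and $\chi=\omega^\star\sqcup_{{}_{\mathfrak{S}}}\rho$, that is $\chi\in\widecheck{\mathfrak{S}}\smallsetminus\overline{\mathfrak{S}}$.

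\textbf{Converse.} Assume $\chi\in U\cap(\widecheck{\mathfrak{S}}\smallsetminus\overline{\mathfrak{S}})$ with $U$ maximal; write $\chi=\mu^\star\sqcup_{{}_{\mathfrak{S}}}\gamma$ with $\gamma=\mu\sqcap_{{}_{\overline{\mathfrak{S}}}}\nu\sqcoversubset_{{}_{\overline{\mathfrak{S}}}}\mu,\nu$ and $\mu^\star\not\sqsubseteq_{{}_{\overline{\mathfrak{S}}}}\gamma$, and invoke the structure theorem to obtain $(\varphi_\alpha)_{\alpha},(\psi_\alpha)_{\alpha}$ over $\Theta^{\overline{\mathfrak{S}}}(\chi)$ with $\varphi_\gamma=\mu$, $\psi_\gamma=\nu$, satisfying (\ref{thirdcoveringpropertySbar0})--(\ref{thirdcoveringpropertySbarviiabis}). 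Since the real pure states consistent with $\chi$ are precisely the covers of the elements of $\Theta^{\overline{\mathfrak{S}}}(\chi)$ --- by the definition of $\asymp$ together with (\ref{thirdcoveringpropertySbarii})--(\ref{thirdcoveringpropertySbariii}) --- and these are exactly the $\varphi_\alpha,\psi_\alpha$ (by (\ref{thirdcoveringpropertySbar0bis}) and the uniqueness clause of the structure theorem), maximality of $U$ together with Lemma \ref{nondegenerateallsigmapure} and Lemma \ref{typoU} forces $U=\{\chi\}\cup\{\,\varphi_\alpha\;\vert\;\alpha\in\Theta^{\overline{\mathfrak{S}}}(\chi)\,\}\cup\{\,\psi_\alpha\;\vert\;\alpha\in\Theta^{\overline{\mathfrak{S}}}(\chi)\,\}$, which is (\ref{orthocompletestructure0}), and (\ref{orthocompletestructure1}). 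The orthogonality relations (\ref{orthocompletestructure2})--(\ref{orthocompletestructure6}) are then read off by converting the $\star$-inequalities of the structure theorem through the dictionary: $\mu^\star=\varphi_\gamma^\star\sqsubseteq_{{}_{\overline{\mathfrak{S}}}}\varphi_\alpha,\psi_\alpha$ (from (\ref{thirdcoveringpropertySbar0})) and $\mu^\star\sqsubseteq_{{}_{\mathfrak{S}}}\chi$ give $\varphi_\gamma\,\underline{\perp}\,\varphi_\alpha,\psi_\alpha,\chi$, while (\ref{thirdcoveringpropertySbarviic}), (\ref{thirdcoveringpropertySbarviib}), (\ref{thirdcoveringpropertySbarviia}), (\ref{thirdcoveringpropertySbarviiabis}) yield (\ref{orthocompletestructure3})--(\ref{orthocompletestructure6}). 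The complementary non-orthogonalities use the real-structure axiom $\neg\,\widehat{\rho\rho^\star}{}^{{}^{\mathfrak{S}}}$ (\ref{starcomplement}): $\varphi_\gamma\,\underline{\perp}\,\psi_\gamma$ would give $\mu^\star\sqsubseteq_{{}_{\overline{\mathfrak{S}}}}\nu$, hence $\chi=\mu^\star\sqcup_{{}_{\mathfrak{S}}}\gamma\sqsubseteq_{{}_{\overline{\mathfrak{S}}}}\nu$, contradicting $\chi\notin\overline{\mathfrak{S}}$; and $\varphi_\alpha\,\underline{\perp}\,\chi$ (or $\psi_\alpha\,\underline{\perp}\,\chi$) for $\alpha\ne\gamma$ would give $\varphi_\alpha^\star\sqsubseteq_{{}_{\mathfrak{S}}}\chi$ (resp. $\psi_\alpha^\star\sqsubseteq_{{}_{\mathfrak{S}}}\chi$), which a routine inspection of the elements of $\Theta^{\overline{\mathfrak{S}}}(\chi)$ via their $\zeta$-profiles and (\ref{thirdcoveringpropertySbar0})--(\ref{thirdcoveringpropertySbarviiabis}) excludes. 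Uniqueness of $\gamma$ is the ``unique up to permutation of $\Theta^{\overline{\mathfrak{S}}}(\chi)\smallsetminus\{\gamma\}$'' clause of the structure theorem. Finally, orthogonal completeness of this explicit $U$ is a finite check: each colinear triple inside $\{\chi\}\cup\{\varphi_\alpha\}\cup\{\psi_\alpha\}$, and each quadruple occurring in the hypothesis of (\ref{orthogcompletecondition2}), is matched, via the colinearity relation ${\mathfrak{r}}^U$ and the relations (\ref{orthocompletestructure2})--(\ref{orthocompletestructure6}), by one of the prescribed orthogonal pairs.

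\textbf{Main obstacle.} The hard part is the first implication. Orthogonal completeness only constrains colinear configurations, so one must first pin down exactly which colinear triples and quadruples occur inside a type-$2$ maximal consistent subset containing $\chi$ --- which requires a case analysis on the position of the two realized legs $\alpha_1,\alpha_2$ among the elements of $\Theta^{\overline{\mathfrak{S}}}(\chi)$ and on which of the $\varphi$'s and $\psi$'s actually lie in $U$ --- and then, from the orthogonal pair that completeness delivers, reconstruct the representation $\chi=\omega^\star\sqcup_{{}_{\mathfrak{S}}}(\omega\sqcap_{{}_{\overline{\mathfrak{S}}}}\phi)$ with $\omega$ a cover of an element of $\Theta^{\overline{\mathfrak{S}}}(\chi)$. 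The expansion formula (\ref{developmentetildeordersimplify}) and Theorem \ref{lambdawr} keep this bookkeeping finite, but it is the technical core of the lemma.
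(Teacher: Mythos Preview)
Your proposal follows essentially the same approach as the paper: both rely on the explicit description of $\Theta^{\overline{\mathfrak{S}}}(\chi)$ from Theorem \ref{lambdawr}, apply orthogonal completeness to the configuration of pure covers, and for the converse read off the orthogonality relations (\ref{orthocompletestructure2})--(\ref{orthocompletestructure6}) directly from the $\star$-relations (\ref{thirdcoveringpropertySbar0})--(\ref{thirdcoveringpropertySbarviiabis}) via the $\star\leftrightarrow\underline{\perp}$ dictionary.

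One point of divergence worth noting concerns the first implication. The paper's argument is more direct than the route you describe: rather than seeking a pair of the form $\omega\,\underline{\perp}\,\chi$ and then reconstructing the $\widecheck{\mathfrak{S}}$-presentation, the paper simply applies condition (\ref{orthogcompletecondition2}) to the four pure covers $\sigma_1,\sigma_2,\sigma_3,\sigma_4$ supplied by Lemma \ref{lemmaconsistenttype2} and reads off the constraint in the explicit coordinates of Theorem \ref{lambdawr} --- the orthogonality among pure tensors forces $(\alpha''=\alpha,\beta''=\beta)$ or $(\alpha'=\alpha,\beta'=\beta)$, which is exactly the $\widecheck{\mathfrak{S}}$ condition. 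Your plan to force the orthogonal pair to involve $\chi$ itself is not wrong, but it is less immediate: condition (\ref{orthogcompletecondition2}) only gives orthogonality among the $\sigma_i$, and condition (\ref{orthogcompletecondition1}) applied to the triple $(\chi,\sigma_1,\sigma_2)$ may yield $\sigma_1\,\underline{\perp}\,\sigma_2$ rather than $\sigma_i\,\underline{\perp}\,\chi$, so the case analysis you flag as the ``main obstacle'' is genuinely needed in your version but is bypassed entirely by the paper's coordinate computation.
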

\begin{proof}
As long as $\lambda\in ({\check{\mathfrak{S}}}\smallsetminus \overline{ \mathfrak{S}})$, $\Theta^{\overline{\mathfrak{S}}}(\lambda) $ is given as in Theorem \ref{lambdawr}.\\
Now we use the condition of orthogonal completeness to conclude that necessarily $(\alpha''=\alpha,\beta''=\beta)$ or $(\alpha'=\alpha,\beta'=\beta)$. This concludes the proof of the first assertion, i.e. $\lambda\in ({\widecheck{\mathfrak{S}}}\smallsetminus \overline{ \mathfrak{S}})$.\\
Then, in (\ref{orthocompletestructure0})--(\ref{orthocompletestructure6}), we adopt the same notation as in (\ref{thirdcoveringpropertySbar0}) (\ref{thirdcoveringpropertySbar0bis}) (\ref{thirdcoveringpropertySbarviic}) (\ref{thirdcoveringpropertySbarviib}) for $\varphi_{\alpha}$ and $\psi_{\alpha}$ with $\alpha$ any element of $\Theta^{\overline{ \mathfrak{S}}}(\chi)$. %The properties (\ref{thirdcoveringpropertySbarviib})(\ref{thirdcoveringpropertySbarviic}) imply directly (\ref{orthocompletestructure3})(\ref{orthocompletestructure4}). The Lemma \ref{lemmawidecheck1} implies immediately (\ref{orthocompletestructure2}).
With these formulas, the last assertion is trivial to check.
\end{proof}

\begin{lemma}\label{natureorthocompleteUbis}
Let us consider an orthogonally complete consistent subset such that there exists $\chi\in ({\widecheck{\mathfrak{S}}}\smallsetminus \overline{ \mathfrak{S}})$ with $\chi\in U$.  We adopt the same notation as in (\ref{thirdcoveringpropertySbar0}) (\ref{thirdcoveringpropertySbar0bis}) (\ref{thirdcoveringpropertySbarviic}) (\ref{thirdcoveringpropertySbarviib}) for $\gamma$ and for $\varphi_{\alpha}$ and $\psi_{\alpha}$ with $\alpha$ any element of $\Theta^{\overline{ \mathfrak{S}}}(\chi)$. \\ 
Let us now assume that $U$ is of {\em consistent type 1}, i.e. $\existunique \delta\in \Theta^{\overline{ \mathfrak{S}}}(\chi)$ such that $\forall \sigma\in U$ we have $(\sigma\sqcap_{{}_{\overline{ \mathfrak{S}}}}\chi)=\delta$.  Let us fix this element $\delta$. \\
Then,  we are in one of the two following cases \\
(1) $\delta\in \Theta^{\overline{ \mathfrak{S}}}(\chi)\smallsetminus \{\gamma\}$. Then, there exists $\chi'\in {\widecheck{\mathfrak{S}}}\smallsetminus {\overline{\mathfrak{S}}}$ such that
\begin{eqnarray}
&&\hspace{-0.5cm}U =  \{\chi,\varphi_\delta,\psi_\delta,\chi'\,\}\label{orthocompletestructure0bis}\\
&&\varphi_\delta,\psi_\delta\in\overline{ \mathfrak{S}}{}^{{}^{pure}}  \;\textit{\rm and}\; \varphi_\delta,\psi_\delta \sqcoversupset_{{}_{\overline{ \mathfrak{S}}}}\delta\;\;\textit{\rm and}\;\;\varphi_\delta\not=\psi_\delta\label{orthocompletestructure1bis}\;\;\;\;\;\;\;\;\;\;\;\;\;\;\;\;\;\;\;\;\;\;\;\;\\
&&\varphi_{\delta}\underline{\perp}\psi_{\delta},\;\;\varphi_{\delta} \not\!\!\!\underline{\perp}\chi,\;\;\psi_{\delta} \not\!\!\!\underline{\perp}\chi,\;\;\varphi_{\delta} \not\!\!\!\underline{\perp}\chi',\;\;\psi_{\delta} \not\!\!\!\underline{\perp}\chi',\;\;\chi\underline{\perp}\chi'\label{orthocompletestructure3bis}
\end{eqnarray}
(2) $\delta=\gamma$. Then, there exists $\chi'\in {\widecheck{\mathfrak{S}}}\smallsetminus {\overline{\mathfrak{S}}}$ such that
\begin{eqnarray}
&&\hspace{-0.5cm}U =  \{\chi,\varphi_\gamma,\psi_\gamma,\chi'\,\}\label{orthocompletestructure0ter}\\
&&\varphi_\gamma,\psi_\gamma\in\overline{ \mathfrak{S}}{}^{{}^{pure}}  \;\textit{\rm and}\; \varphi_\gamma,\psi_\gamma \sqcoversupset_{{}_{\overline{ \mathfrak{S}}}}\gamma\;\;\textit{\rm and}\;\;\varphi_\gamma\not=\psi_\gamma\label{orthocompletestructure1ter}\;\;\;\;\;\;\;\;\;\;\;\;\;\;\;\;\;\;\;\;\;\;\;\;\\
&&\varphi_{\gamma}\not\!\!\!\underline{\perp}\psi_{\gamma},\;\;\varphi_{\gamma} \underline{\perp}\chi,\;\;\psi_{\gamma} \not\!\!\!\underline{\perp}\chi,\;\;\varphi_{\gamma} \not\!\!\!\underline{\perp}\chi',\;\;\psi_{\gamma} \underline{\perp}\chi',\;\;\chi\not\!\!\!\underline{\perp}\chi'\label{orthocompletestructure3ter}
\end{eqnarray}
\end{lemma}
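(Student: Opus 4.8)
The plan is to mimic the proof of Lemma~\ref{natureorthocompleteU}, the twist being that here $U$ is of consistent type~1 (with its distinguished witness $\delta\in\Theta^{\overline{\mathfrak{S}}}(\chi)$) while $\chi\in{\widecheck{\mathfrak{S}}}\smallsetminus\overline{\mathfrak{S}}$ is part of the hypothesis, so that $\Theta^{\overline{\mathfrak{S}}}(\chi)$ already carries the explicit structure of Theorem~\ref{lambdawr}: the distinguished element $\gamma$ with $\varphi_\gamma^\star\sqsubseteq_{{}_{\mathfrak{S}}}\chi$ (hence $\varphi_\gamma\,\underline{\perp}\,\chi$), and the families $(\varphi_\alpha)_\alpha,(\psi_\alpha)_\alpha$ with $\underline{\,\alpha\,}_{{}_{\overline{\mathfrak{S}}}}=\{\varphi_\alpha,\psi_\alpha\}$ for each $\alpha\in\Theta^{\overline{\mathfrak{S}}}(\chi)$.

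First I would pin down $U\cap\overline{\mathfrak{S}}{}^{{}^{pure}}$. If $\sigma\in U$ is pure, then $\sigma\asymp\chi$ together with the type-1 condition $\sigma\sqcap_{{}_{\mathfrak{S}}}\chi=\delta$ and the observation~(\ref{Rek2}) force $\delta\sqcoversubset_{{}_{\overline{\mathfrak{S}}}}\sigma$, and since exactly two pure states lie above $\delta$, we get $U\cap\overline{\mathfrak{S}}{}^{{}^{pure}}\subseteq\{\varphi_\delta,\psi_\delta\}$; conversely both $\varphi_\delta,\psi_\delta$ are consistent with every element of $U$ — with $\chi$ through $\delta$, and with any non-real $\chi''\in U$ also through $\delta$ (the type-1 condition and~(\ref{Rek1}) give $\delta=\chi''\sqcap_{{}_{\mathfrak{S}}}\chi\in\Theta^{\overline{\mathfrak{S}}}(\chi'')$) — so maximality yields equality. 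The same analysis shows the non-real elements of $U$ are exactly the $\chi''\in{\check{\mathfrak{S}}}\smallsetminus\overline{\mathfrak{S}}$ with $\delta\in\Theta^{\overline{\mathfrak{S}}}(\chi'')$ (they are pairwise consistent and consistent with $\varphi_\delta,\psi_\delta$ through $\delta$, using property~(\ref{thirdcoveringpropertySbarii})), so it remains to show this set is $\{\chi,\chi'\}$ for a single $\chi'\in{\widecheck{\mathfrak{S}}}\smallsetminus\overline{\mathfrak{S}}$.

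This is the heart of the argument and the step I expect to be the main obstacle. It has two sub-parts. (i) Every non-real $\chi''\in U$ lies in ${\widecheck{\mathfrak{S}}}$: since ${\mathfrak{r}}^U(\chi'',\varphi_\delta,\psi_\delta)$ holds (because $\varphi_\delta\sqcap_{{}_{\overline{\mathfrak{S}}}}\psi_\delta=\delta\sqcoversubset_{{}_{\mathfrak{S}}}\chi''$), condition~(\ref{orthogcompletecondition1}) produces an orthogonal pair in that triple, and combining this with the displayed implication relating ${\mathfrak{G}}^{\check{\mathfrak{S}}}$ and ${\mathfrak{G}}^{\widecheck{\mathfrak{S}}}$, and — when the orthogonal pair is $\varphi_\delta\,\underline{\perp}\,\psi_\delta$ — with the colinear triples $\{\chi'',\chi,\varphi_\delta\}$ and $\{\chi'',\chi,\psi_\delta\}$ fed into~(\ref{orthogcompletecondition1}) together with $\varphi_\gamma\,\underline{\perp}\,\chi$, one forces $\chi''\in{\mathfrak{G}}^{\widecheck{\mathfrak{S}}}$. (ii) There is exactly one such $\chi''$ besides $\chi$: using the expansion~(\ref{developmentetildeordersimplify}) and the computation~(\ref{computationexample}) for the tensor product of quantum-bit spaces, one checks that the only non-real elements of ${\widecheck{\mathfrak{S}}}$ whose $\Theta^{\overline{\mathfrak{S}}}$ contains $\delta$ are $\chi$ and the state $\chi':=\delta\sqcup_{{}_{\mathfrak{S}}}\gamma^\star$ when $\delta\neq\gamma$ (which lies in ${\widecheck{\mathfrak{S}}}$ by~(\ref{thirdcoveringpropertySbarvi})), resp. $\chi':=\psi_\gamma^\star\sqcup_{{}_{\mathfrak{S}}}\gamma$ when $\delta=\gamma$ (which lies in ${\widecheck{\mathfrak{S}}}$ by~(\ref{defSwidecheck})); every other candidate of the form $(\text{pure above }\delta)^\star\sqcup_{{}_{\mathfrak{S}}}\delta$ collapses to $\varphi_\delta$ or $\psi_\delta$. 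One also checks $\chi'\neq\chi$ and $\chi'\notin\overline{\mathfrak{S}}$ (otherwise $\chi'$ is pure, contradicting Step 1), so $U=\{\chi,\varphi_\delta,\psi_\delta,\chi'\}$.

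Finally I would split into the cases $\delta\neq\gamma$ and $\delta=\gamma$ and read off the orthogonality pattern. When $\delta\neq\gamma$, converting the inclusions~(\ref{thirdcoveringpropertySbar0})--(\ref{thirdcoveringpropertySbarviib}) into orthogonality relations exactly as in Lemma~\ref{natureorthocompleteU} gives $\varphi_\delta\,\underline{\perp}\,\psi_\delta$, $\varphi_\delta\,\not\!\!\!\underline{\perp}\,\chi$, $\psi_\delta\,\not\!\!\!\underline{\perp}\,\chi$, while $\gamma\sqsubseteq_{{}_{\mathfrak{S}}}\chi$ and $\gamma^\star\sqsubseteq_{{}_{\mathfrak{S}}}\chi'$ give $\chi\,\underline{\perp}\,\chi'$ and the explicit form of $\chi'$ gives $\varphi_\delta\,\not\!\!\!\underline{\perp}\,\chi'$, $\psi_\delta\,\not\!\!\!\underline{\perp}\,\chi'$, which is~(\ref{orthocompletestructure0bis})--(\ref{orthocompletestructure3bis}). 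When $\delta=\gamma$, instead $\varphi_\gamma^\star\sqsubseteq_{{}_{\mathfrak{S}}}\chi$ gives $\varphi_\gamma\,\underline{\perp}\,\chi$ and $\psi_\gamma^\star\sqsubseteq_{{}_{\mathfrak{S}}}\chi'$ gives $\psi_\gamma\,\underline{\perp}\,\chi'$, while $\varphi_\gamma\,\not\!\!\!\underline{\perp}\,\psi_\gamma$, $\psi_\gamma\,\not\!\!\!\underline{\perp}\,\chi$, $\varphi_\gamma\,\not\!\!\!\underline{\perp}\,\chi'$, $\chi\,\not\!\!\!\underline{\perp}\,\chi'$ follow from the same explicit computation, which is~(\ref{orthocompletestructure0ter})--(\ref{orthocompletestructure3ter}). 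The routine part is this last paragraph; the delicate part is sub-part (ii), where ruling out all spurious non-real elements above $\delta$ requires descending into the tensor-product formulas.
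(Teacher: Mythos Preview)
Your construction of $\chi'$ (namely $\delta\sqcup_{{}_{\mathfrak{S}}}\gamma^\star$ when $\delta\neq\gamma$, and $\psi_\gamma^\star\sqcup_{{}_{\mathfrak{S}}}\gamma$ when $\delta=\gamma$) and the verification of the orthogonality pattern are exactly what the paper does; the paper's proof is in fact much shorter than yours and essentially consists of writing down those two expressions for $\chi'$, invoking (\ref{thirdcoveringpropertySbarvi}) for existence in the first case, and declaring the orthogonality relations ``easy to check''.

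Where your proposal goes beyond the paper --- the justification that $U$ has exactly four elements --- there is a genuine gap in sub-part~(ii). Your claim that ``the only non-real elements of $\widecheck{\mathfrak{S}}$ whose $\Theta^{\overline{\mathfrak{S}}}$ contains $\delta$ are $\chi$ and $\chi'$'' is false. Concretely, take $\delta=\delta_1=(\alpha\widetilde{\otimes}\beta^\star)\sqcap_{{}_{\overline{\mathfrak{S}}}}(\alpha^\star\widetilde{\otimes}\beta')$ in the explicit two-qubit computation; then for \emph{any} pure $a'\in\overline{\mathfrak{S}}_A^{pure}$ with $a'\notin\{\alpha,\alpha^\star\}$, the element $\chi'':=(\alpha\widetilde{\otimes}\beta)^\star\sqcup_{{}_{\mathfrak{S}}}\bigl((\alpha\widetilde{\otimes}\beta)\sqcap_{{}_{\overline{\mathfrak{S}}}}(a'\widetilde{\otimes}\beta')\bigr)$ lies in $\widecheck{\mathfrak{S}}\smallsetminus\overline{\mathfrak{S}}$ and has $\delta_1\in\Theta^{\overline{\mathfrak{S}}}(\chi'')$ by (\ref{computationexample}). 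These $\chi''$ are pairwise distinct (different $a'$ give different $\gamma$-components), distinct from $\chi$ (unless $a'=\alpha'$), and distinct from $\chi'$ (whose $\varphi_\gamma$-component is $\alpha^\star\widetilde{\otimes}\beta'^\star$, not $\alpha\widetilde{\otimes}\beta$). So raw enumeration cannot cut $U$ down to four elements; what is actually doing the work is the orthogonal-completeness condition (\ref{orthogcompletecondition1}) applied to the colinear triples among these candidates, and your argument does not carry that through. The paper does not fill this gap either --- it simply asserts the structure and exhibits the correct $\chi'$ --- so your overall line is aligned with the paper, but the uniqueness reasoning you add would not stand as written.
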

\begin{proof}
We also adopt the same notation as in (\ref{thirdcoveringpropertySbar0}) (\ref{thirdcoveringpropertySbar0bis}) (\ref{thirdcoveringpropertySbarviic}). Without the constraint imposed by the Lemma \ref{nondegenerateallsigmapure}, and for the maximality of $U$,  we can consider the existence of an element $\chi'\in  {\widecheck{\mathfrak{S}}}\smallsetminus {\overline{\mathfrak{S}}}$ distinct from $\chi$ such that $\chi'\in U$. In order to guaranty that $U$ is orthogonality complete, we have to require new orthogonality conditions.\\
In the first configuration, i.e. $\delta\in \Theta^{\overline{ \mathfrak{S}}}(\chi)\smallsetminus \{\gamma\}$. We choose $\chi':=\delta \sqcup_{{}_{\mathfrak{S}}}\gamma^\star$. Such an element does effectively exist in ${\widecheck{\mathfrak{S}}}\smallsetminus {\overline{\mathfrak{S}}}$ because of property (\ref{thirdcoveringpropertySbarvi}). The announced orthogonality relations are easy to check.\\
In the second configuration, i.e. $\delta=\gamma$. We choose $\chi':=\psi_\gamma^\star \sqcup_{{}_{\mathfrak{S}}}\gamma$. Such an element does effectively exist in ${\widecheck{\mathfrak{S}}}\smallsetminus {\overline{\mathfrak{S}}}$. The announced orthogonality relations are easy to check.
\end{proof}

\begin{definition}
We will denote by 
${ \mathfrak{E}}_{\underline{\perp}}^{\check{\mathfrak{S}}}$ the following consistency cover of ${ \mathfrak{G}}^{\widecheck{\mathfrak{S}}}$ (Note that we have a cover of ${ \mathfrak{G}}^{\widecheck{\mathfrak{S}}}$ and not ${ \mathfrak{G}}^{\check{\mathfrak{S}}}$ because of Lemma \ref{natureorthocompleteU}) :
\begin{eqnarray}
{ \mathfrak{E}}_{\underline{\perp}}^{\check{\mathfrak{S}}} &:=& \{\,U \;\vert\;\exists V\in { \mathfrak{E}}^{\check{\mathfrak{S}}}, U\subseteq V,  \;\textit{\rm and $U$ is orthogonally complete} \,\}.
\end{eqnarray}
\end{definition}

\begin{lemma}\label{caseanalysisUSpure}
Let us consider $U\in { \mathfrak{E}}_{\underline{\perp}}^{\check{\mathfrak{S}}}$ such that $U\subseteq \overline{ \mathfrak{S}}{}^{{}^{pure}}$.  Let us moreover consider $\lambda,\sigma_1,\sigma_2,\sigma_3,\sigma_4\in U$, all distinct, such that ${\mathfrak{r}}^U(\lambda,\sigma_1,\sigma_2)$ and ${\mathfrak{r}}^U(\lambda,\sigma_3,\sigma_4)$ and $\forall\mu,\nu,\rho\in \{\sigma_1,\sigma_2,\sigma_3,\sigma_4\}, \;\neg\; {\mathfrak{r}}^U(\mu,\nu,\rho)$.  Endly, we will suppose that the plane to which $\lambda,\sigma_1,\sigma_2,\sigma_3,\sigma_4$ belong is not a starred plane.\\
Then, there exists $\mu\in { \mathfrak{G}}^{\widecheck{\mathfrak{S}}}\smallsetminus \overline{ \mathfrak{S}}$ and there exists $U'\in { \mathfrak{E}}_{\underline{\perp}}^{\check{\mathfrak{S}}}$ with $\mu, \sigma_1,\sigma_2,\sigma_3,\sigma_4\in U'$ such that ${\mathfrak{r}}^{U'}(\mu,\sigma_1,\sigma_3)$ and ${\mathfrak{r}}^{U'}(\mu,\sigma_2,\sigma_4)$, and there exists $\nu\in { \mathfrak{G}}^{\widecheck{\mathfrak{S}}}\smallsetminus \overline{ \mathfrak{S}}$ and there exists $U'\in { \mathfrak{E}}_{\underline{\perp}}^{\check{\mathfrak{S}}}$ with $\nu, \sigma_1,\sigma_2,\sigma_3,\sigma_4\in U'$ such that ${\mathfrak{r}}^{U'}(\nu,\sigma_2,\sigma_3)$ and ${\mathfrak{r}}^{U'}(\nu,\sigma_1,\sigma_4)$.
%Then, $(\sigma_1\sqcap_{{}_{\overline{ \mathfrak{S}}}}\sigma_4)\sqcup_{{}_{{ \mathfrak{S}}}}(\sigma_2\sqcap_{{}_{\overline{ \mathfrak{S}}}}\sigma_3)$ and $(\sigma_1\sqcap_{{}_{\overline{ \mathfrak{S}}}}\sigma_3)\sqcup_{{}_{{ \mathfrak{S}}}}(\sigma_2\sqcap_{{}_{\overline{ \mathfrak{S}}}}\sigma_4)$ are in ${ \mathfrak{G}}^{\check{\mathfrak{S}}}\smallsetminus \overline{ \mathfrak{S}}$. More precisely, $(\sigma_1\sqcap_{{}_{\overline{ \mathfrak{S}}}}\sigma_4)\sqcup_{{}_{{ \mathfrak{S}}}}(\sigma_2\sqcap_{{}_{\overline{ \mathfrak{S}}}}\sigma_3)$ \underline{or} $(\sigma_1\sqcap_{{}_{\overline{ \mathfrak{S}}}}\sigma_3)\sqcup_{{}_{{ \mathfrak{S}}}}(\sigma_2\sqcap_{{}_{\overline{ \mathfrak{S}}}}\sigma_4)$ is in ${ \mathfrak{G}}^{\widecheck{\mathfrak{S}}}\smallsetminus \overline{ \mathfrak{S}}$. 
\end{lemma}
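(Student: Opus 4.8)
The plan is to bootstrap from Lemma~\ref{lemmaproj3} (the third Veblen--Young property), which already produces diagonal points $\mu,\nu$ together with ambient consistent subsets realizing the required colinearities; the genuinely new work is to locate these diagonal points inside ${ \mathfrak{G}}^{\widecheck{\mathfrak{S}}}\smallsetminus\overline{ \mathfrak{S}}$ (the "$\widecheck{\ }$", not merely "$\check{\ }$") and to upgrade the ambient consistent subsets to orthogonally complete ones, i.e. to members of ${ \mathfrak{E}}_{\underline{\perp}}^{\check{\mathfrak{S}}}$.

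First I would reduce the whole configuration to a two-factor situation. Since $(\sigma_1\sqcap_{{}_{{ \mathfrak{S}}}}\sigma_2),(\sigma_3\sqcap_{{}_{{ \mathfrak{S}}}}\sigma_4)\sqcoversubset_{{}_{{ \mathfrak{S}}}}\lambda$ and the four $\sigma_i$ are pairwise non-colinear, Theorem~\ref{zetalambda} gives a subset $I\subseteq\{1,\dots,N\}$ with $Card(I)=N-2$ on which $\lambda,\sigma_1,\dots,\sigma_4$ all share partial traces; writing $j<k$ for the two remaining indices and using the notation $\Upsilon^{(j,k)}$, $\Delta^{(j,k);\nu}$ of Definition~\ref{definUpsilonDelta}, everything takes place inside the image of $\overline{ \mathfrak{S}}_{A_j}\widetilde{\otimes}\overline{ \mathfrak{S}}_{A_k}$, where the explicit computation of $cl_c$ culminating in formula~(\ref{computationexample}) and the structure theorem, Theorem~\ref{lambdawr}, are available. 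In this "two-qubit" picture, the hypotheses "$\sigma_i$ pure, pairwise non-colinear" together with "$\lambda$ pure" force, up to relabelling the quadrilateral and swapping stars, the configuration of the summary following~(\ref{computationexample}): $\sigma_1,\sigma_3$ (resp. $\sigma_2,\sigma_4$) lie on a common row or column of the grid through $\lambda$, and none of the coordinates involved is related to another by $(\cdot)^\star$, which is precisely the negation of the defining conditions of a starred plane, i.e. exactly the hypothesis that $\wp$ (the plane generated by $\lambda,\sigma_1,\dots,\sigma_4$) is not a starred plane.

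Next I would read off the diagonal points from Theorem~\ref{lambdawr}. The element $\mu$ produced by Lemma~\ref{lemmaproj3} with ${\mathfrak{r}}(\mu,\sigma_1,\sigma_3)$ and ${\mathfrak{r}}(\mu,\sigma_2,\sigma_4)$ is, through $\Delta^{(j,k);\nu}$, of the shape $\mu=\phi_0^\star\sqcup_{{}_{{ \mathfrak{S}}}}(\phi_0\sqcap_{{}_{\overline{ \mathfrak{S}}}}\psi_0)$ with $\phi_0,\psi_0\in\overline{ \mathfrak{S}}{}^{{}^{pure}}$ and $\phi_0\sqcap_{{}_{\overline{ \mathfrak{S}}}}\psi_0\sqcoversubset_{{}_{\overline{ \mathfrak{S}}}}\phi_0,\psi_0$, hence $\mu\in{\widecheck{\mathfrak{S}}}$ by the definition~(\ref{defSwidecheck}); moreover the case analysis $(1)$--$(3C)$ behind the big theorem shows that the only alternatives to $\mu\in{\mathfrak{S}}\smallsetminus\overline{ \mathfrak{S}}$ (namely $\mu$ a pure real state, arising in cases $(1)$, $(2)$, $(3A)$, $(3B)$) all require one of the excluded $(\cdot)^\star$-coincidences among the coordinates, i.e. force $\wp$ to be a starred plane; so under our hypotheses $\mu\in{ \mathfrak{G}}^{\widecheck{\mathfrak{S}}}\smallsetminus\overline{ \mathfrak{S}}$. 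Furthermore $\Theta^{\overline{ \mathfrak{S}}}(\mu)$ has three elements and the associated pure tensors $\varphi_\alpha,\psi_\alpha$ (notation of~(\ref{thirdcoveringpropertySbar0})--(\ref{thirdcoveringpropertySbarviib})) are exactly $\sigma_1,\sigma_2,\sigma_3,\sigma_4$ together with two further "opposite" pure tensors. Running the same argument with $\sigma_3$ and $\sigma_4$ exchanged produces $\nu\in{ \mathfrak{G}}^{\widecheck{\mathfrak{S}}}\smallsetminus\overline{ \mathfrak{S}}$ with ${\mathfrak{r}}(\nu,\sigma_2,\sigma_3)$ and ${\mathfrak{r}}(\nu,\sigma_1,\sigma_4)$.

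Finally, to obtain the orthogonally complete ambient subsets, I would take for $U'$ the maximal consistent subset built around $\mu$ in Lemma~\ref{natureorthocompleteU}, namely $\{\mu\}\cup\{\,\varphi_\alpha,\psi_\alpha\;\vert\;\alpha\in\Theta^{\overline{ \mathfrak{S}}}(\mu)\,\}$: since $\mu\in{\widecheck{\mathfrak{S}}}\smallsetminus\overline{ \mathfrak{S}}$ is of "consistent type $2$", this set is maximal consistent and orthogonally complete by formula~(\ref{orthocompletestructure0}), hence belongs to ${ \mathfrak{E}}_{\underline{\perp}}^{\check{\mathfrak{S}}}$; by the identification above it contains $\mu,\sigma_1,\sigma_2,\sigma_3,\sigma_4$, and ${\mathfrak{r}}^{U'}(\mu,\sigma_1,\sigma_3)$, ${\mathfrak{r}}^{U'}(\mu,\sigma_2,\sigma_4)$ hold because $\sigma_1\sqcap_{{}_{{ \mathfrak{S}}}}\sigma_3$ and $\sigma_2\sqcap_{{}_{{ \mathfrak{S}}}}\sigma_4$ are two of the three elements of $\Theta^{\overline{ \mathfrak{S}}}(\mu)$, each covered by $\mu$ via~(\ref{thirdcoveringpropertySbarii}); the subset $U''$ is built the same way around $\nu$. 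The main obstacle is the second step: one must carefully match the combinatorial labels of the complete quadrilateral $\{\sigma_1,\sigma_2,\sigma_3,\sigma_4;\lambda\}$ with the labels $\alpha,\alpha',\alpha'',\beta,\beta',\beta''$ appearing in the computation~(\ref{computationexample}) and check that the "non-starred" hypothesis is exactly the condition ruling out the degenerate cases $(1)$--$(3B)$, so that $\mu$ and $\nu$ necessarily land in ${\widecheck{\mathfrak{S}}}\smallsetminus\overline{ \mathfrak{S}}$ with their $\varphi/\psi$-tensors recovering the four given states; once this is in place, the rest follows immediately from Theorem~\ref{lambdawr} and Lemma~\ref{natureorthocompleteU}.
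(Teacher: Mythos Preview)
Your overall strategy --- reduce to two tensor factors via Theorem~\ref{zetalambda}, identify the diagonal points through the explicit computation~(\ref{computationexample}), and then package everything with Lemma~\ref{natureorthocompleteU} --- matches the paper's. But there is a genuine gap in your second step.

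You claim that the hypotheses ``$\sigma_i$ pure, pairwise non-colinear, $\lambda$ pure'' together with ``$\wp$ not starred'' force the configuration and guarantee $\mu,\nu\in{\widecheck{\mathfrak{S}}}\smallsetminus\overline{\mathfrak{S}}$. This is not correct: you have not used the hypothesis $U\in\mathfrak{E}_{\underline{\perp}}^{\check{\mathfrak{S}}}$, i.e.\ that $U$ is \emph{orthogonally complete}. In the paper's proof this hypothesis does essential work twice. First, condition~(\ref{orthogcompletecondition1}) (every line in $U$ contains an orthogonal pair) forces the five coordinates $\Upsilon^{(j,k)}_\lambda,\Upsilon^{(j,k)}_{\sigma_1},\dots,\Upsilon^{(j,k)}_{\sigma_4}$ to be expressible with only four letters $\alpha,\alpha^\star,\alpha',\beta,\beta^\star,\beta'$ in one of four specific patterns (labelled (i)--(iv) in the paper). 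Without this, you could have e.g.\ $\Upsilon^{(j,k)}_{\sigma_1}=\alpha\widetilde\otimes\beta'$, $\Upsilon^{(j,k)}_{\sigma_2}=\alpha''\widetilde\otimes\beta'$ with $\alpha''\notin\{\alpha,\alpha^\star\}$, and then the diagonal point is only in $\check{\mathfrak{S}}$, not $\widecheck{\mathfrak{S}}$. Second, condition~(\ref{orthogcompletecondition2}) is what eliminates pattern~(iv), in which $\Upsilon^{(j,k)}_\lambda=\alpha'\widetilde\otimes\beta'$ and the four $\sigma_i$ sit at $\alpha\widetilde\otimes\beta',\,\alpha^\star\widetilde\otimes\beta',\,\alpha'\widetilde\otimes\beta,\,\alpha'\widetilde\otimes\beta^\star$: this plane is \emph{not} starred, yet its diagonal point $(\sigma_1\sqcap\sigma_3)\sqcup(\sigma_2\sqcap\sigma_4)$ lands in $\check{\mathfrak{S}}\smallsetminus\widecheck{\mathfrak{S}}$ (cf.\ Remark~\ref{remarkstarredplanes}). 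The non-starred hypothesis only removes pattern~(i).

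So your identification ``non-starred $\Leftrightarrow$ cases (1)--(3B) of the big theorem excluded'' is wrong on both sides: it neither accounts for why only patterns (i)--(iv) arise, nor for why (iv) is ruled out. Once you insert the two orthogonal-completeness arguments at the right places, the rest of your plan (the explicit formulas for $\mu,\nu$ in patterns (ii) and (iii), and the packaging via Lemma~\ref{natureorthocompleteU}) goes through as you describe.
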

\begin{proof}
From Theorem \ref{zetalambda}, we know that the conditions ${\mathfrak{r}}^U(\lambda,\sigma_1,\sigma_2)$ and ${\mathfrak{r}}^U(\lambda,\sigma_3,\sigma_4)$ and $\forall\mu,\nu,\rho\in \{\sigma_1,\sigma_2,\sigma_3,\sigma_4\}, \;\neg\; {\mathfrak{r}}^U(\mu,\nu,\rho)$ with $\lambda,\sigma_1,\sigma_2,\sigma_3,\sigma_4\in U\subseteq \overline{ \mathfrak{S}}{}^{{}^{pure}}$ imply $\exists I\subseteq\{1,\cdots,N\}, Card(I)=N-2\;\vert\; \forall i\in I, \forall \sigma,\kappa\in \{\lambda,\sigma_1,\sigma_2,\sigma_3,\sigma_4\},\;\;\zeta^{{}^{{\mathfrak{S}}_{A_1}\cdots{\mathfrak{S}}_{A_N}}}_{(i)}(\sigma) = \zeta^{{}^{{\mathfrak{S}}_{A_1}\cdots{\mathfrak{S}}_{A_N}}}_{(i)}(\kappa)$.  Let us now denote $j$ and $k$ such that $\{j,k\}=\{1,\cdots,N\}\smallsetminus I$ and $j<k$ and let us now adopt the notations of Definition \ref{definUpsilonDelta}.\\
The condition (\ref{orthogcompletecondition1}) imposed on $U$ by the requirement of orthogonal completeness ensures that $\lambda,\sigma_1,\sigma_2,\sigma_3,\sigma_4$ can be written in terms of four elements $\alpha,\alpha'\in \overline{ \mathfrak{S}}{}_A^{{}^{pure}},\beta,\beta'\in \overline{ \mathfrak{S}}{}_B^{{}^{pure}}$ with $\alpha'\not=\alpha,\alpha^\star$ and $\beta'\not=\beta,\beta^\star$, in one of the following configurations
\begin{eqnarray}
&&\hspace{-1.5cm}(i)\;\;\; \Upsilon^{(j,k)}_{\lambda}:=\alpha\widetilde{\otimes}\beta,\;\;\Upsilon^{(j,k)}_{\sigma_1}:=\alpha'\widetilde{\otimes}\beta,\;\;\Upsilon^{(j,k)}_{\sigma_2}:=\alpha^\star\widetilde{\otimes}\beta,\;\;\Upsilon^{(j,k)}_{\sigma_3}:=\alpha\widetilde{\otimes}\beta',\;\;\Upsilon_{\sigma_4}:=\alpha\widetilde{\otimes}\beta^\star\;\;\;\;\;\;\;\;\;\;\;\;\;\;\;\;\;\;\\
&&\hspace{-1.5cm}(ii)\;\;\; \Upsilon^{(j,k)}_{\lambda}:=\alpha'\widetilde{\otimes}\beta,\;\;\Upsilon^{(j,k)}_{\sigma_1}:=\alpha\widetilde{\otimes}\beta,\;\;\Upsilon^{(j,k)}_{\sigma_2}:=\alpha^\star\widetilde{\otimes}\beta,\;\;\Upsilon^{(j,k)}_{\sigma_3}:=\alpha'\widetilde{\otimes}\beta',\;\;\Upsilon^{(j,k)}_{\sigma_4}:=\alpha'\widetilde{\otimes}\beta^\star\;\;\;\;\;\;\;\;\;\;\;\;\;\;\;\;\;\;\\
&&\hspace{-1.5cm}(iii)\;\;\; \Upsilon^{(j,k)}_{\lambda}:=\alpha\widetilde{\otimes}\beta',\;\;\Upsilon^{(j,k)}_{\sigma_1}:=\alpha'\widetilde{\otimes}\beta',\;\;\Upsilon^{(j,k)}_{\sigma_2}:=\alpha^\star\widetilde{\otimes}\beta',\;\;\Upsilon^{(j,k)}_{\sigma_3}:=\alpha\widetilde{\otimes}\beta,\;\;\Upsilon^{(j,k)}_{\sigma_4}:=\alpha\widetilde{\otimes}\beta^\star\;\;\;\;\;\;\;\;\;\;\;\;\;\;\;\;\;\;\\
&&\hspace{-1.5cm}(iv)\;\;\; \Upsilon_{\lambda}:=\alpha'\widetilde{\otimes}\beta',\;\;\Upsilon_{\sigma_1}:=\alpha\widetilde{\otimes}\beta',\;\;\Upsilon_{\sigma_2}:=\alpha^\star\widetilde{\otimes}\beta',\;\;\Upsilon_{\sigma_3}:=\alpha'\widetilde{\otimes}\beta,\;\;\Upsilon_{\sigma_4}:=\alpha'\widetilde{\otimes}\beta^\star.\;\;\;\;\;\;\;\;\;\;\;\;\;\;\;\;\;\;
\end{eqnarray}
The condition (\ref{orthogcompletecondition2}) imposed on $U$ by the requirement of orthogonal completeness ensures that the configuration $(iv)$ is excluded.\\
Endly, the condition requiring that the plane to which $\lambda,\sigma_1,\sigma_2,\sigma_3,\sigma_4$ belong is not a starred plane ensures that the configuration $(i)$ is excluded.\\
%In the configuration $(i)$ we note that 
%\begin{eqnarray}
%&&(\sigma_1\sqcap_{{}_{\overline{ \mathfrak{S}}}}\sigma_3)\sqcup_{{}_{{ \mathfrak{S}}}}(\sigma_2\sqcap_{{}_{\overline{ \mathfrak{S}}}}\sigma_4)= (\Delta^{(j,k);\sigma_1}_{\alpha^\star\widetilde{\otimes}\beta^\star})^\star \sqcup_{{}_{{ \mathfrak{S}}}} (\Delta^{(j,k);\sigma_1}_{\alpha^\star\widetilde{\otimes}\beta'} \sqcap_{{}_{\overline{ \mathfrak{S}}}} \Delta^{(j,k);\sigma_1}_{\alpha'\widetilde{\otimes}\beta^\star})\;\in { \mathfrak{G}}^{\check{\mathfrak{S}}}\smallsetminus { \mathfrak{G}}^{\widecheck{\mathfrak{S}}}.\label{casproblematique}\;\;\;\;\;\;\;\;\;\;\;\;\;\;\;\;\;\;\\
%&&(\sigma_1\sqcap_{{}_{\overline{ \mathfrak{S}}}}\sigma_4)\sqcup_{{}_{{ \mathfrak{S}}}}(\sigma_2\sqcap_{{}_{\overline{ \mathfrak{S}}}}\sigma_3)=(\Delta^{(j,k);\sigma_1}_{\alpha^\star\widetilde{\otimes}\beta^\star})^\star \sqcup_{{}_{{ \mathfrak{S}}}} (\Delta^{(j,k);\sigma_1}_{\alpha^\star\widetilde{\otimes}\beta^\star} \sqcap_{{}_{\overline{ \mathfrak{S}}}} \Delta^{(j,k);\sigma_1}_{\alpha'\widetilde{\otimes}\beta'}) \;\in { \mathfrak{G}}^{\widecheck{\mathfrak{S}}}\smallsetminus \overline{ \mathfrak{S}}\;\;\;\;\;\;\;\;\;\;\;\;\;\;\;\;\;\;
%\end{eqnarray}
On another part we note the following properties.\\
In the configuration $(ii)$ 
\begin{eqnarray}
&&(\sigma_1\sqcap_{{}_{\overline{ \mathfrak{S}}}}\sigma_3)\sqcup_{{}_{{ \mathfrak{S}}}}(\sigma_2\sqcap_{{}_{\overline{ \mathfrak{S}}}}\sigma_4)=(\Delta^{(j,k);\sigma_1}_{\alpha\widetilde{\otimes}\beta})^\star \sqcup_{{}_{{ \mathfrak{S}}}} (\Delta^{(j,k);\sigma_1}_{\alpha\widetilde{\otimes}\beta} \sqcap_{{}_{\overline{ \mathfrak{S}}}} \Delta^{(j,k);\sigma_1}_{\alpha'\widetilde{\otimes}\beta'}) \;\in { \mathfrak{G}}^{\widecheck{\mathfrak{S}}}\smallsetminus \overline{ \mathfrak{S}}\;\;\;\;\;\;\;\;\;\;\;\;\;\;\;\;\;\;\\
&&(\sigma_1\sqcap_{{}_{\overline{ \mathfrak{S}}}}\sigma_4)\sqcup_{{}_{{ \mathfrak{S}}}}(\sigma_2\sqcap_{{}_{\overline{ \mathfrak{S}}}}\sigma_3)=(\Delta^{(j,k);\sigma_1}_{\alpha^\star\widetilde{\otimes}\beta})^\star \sqcup_{{}_{{ \mathfrak{S}}}} (\Delta^{(j,k);\sigma_1}_{\alpha^\star\widetilde{\otimes}\beta} \sqcap_{{}_{\overline{ \mathfrak{S}}}} \Delta^{(j,k);\sigma_1}_{\alpha'\widetilde{\otimes}\beta'}) \;\in { \mathfrak{G}}^{\widecheck{\mathfrak{S}}}\smallsetminus \overline{ \mathfrak{S}}.\;\;\;\;\;\;\;\;\;\;\;\;\;\;\;\;\;\;
\end{eqnarray}
In the configuration $(iii)$ 
\begin{eqnarray}
&&(\sigma_1\sqcap_{{}_{\overline{ \mathfrak{S}}}}\sigma_3)\sqcup_{{}_{{ \mathfrak{S}}}}(\sigma_2\sqcap_{{}_{\overline{ \mathfrak{S}}}}\sigma_4)=(\Delta^{(j,k);\sigma_1}_{\alpha\widetilde{\otimes}\beta})^\star \sqcup_{{}_{{ \mathfrak{S}}}} (\Delta^{(j,k);\sigma_1}_{\alpha\widetilde{\otimes}\beta} \sqcap_{{}_{\overline{ \mathfrak{S}}}} \Delta^{(j,k);\sigma_1}_{\alpha'\widetilde{\otimes}\beta'}) \;\in { \mathfrak{G}}^{\widecheck{\mathfrak{S}}}\smallsetminus \overline{ \mathfrak{S}}\;\;\;\;\;\;\;\;\;\;\;\;\;\;\;\;\;\;\\
&&(\sigma_1\sqcap_{{}_{\overline{ \mathfrak{S}}}}\sigma_4)\sqcup_{{}_{{ \mathfrak{S}}}}(\sigma_2\sqcap_{{}_{\overline{ \mathfrak{S}}}}\sigma_3)=(\Delta^{(j,k);\sigma_1}_{\alpha\widetilde{\otimes}\beta^\star})^\star \sqcup_{{}_{{ \mathfrak{S}}}} (\Delta^{(j,k);\sigma_1}_{\alpha\widetilde{\otimes}\beta^\star} \sqcap_{{}_{\overline{ \mathfrak{S}}}} \Delta^{(j,k);\sigma_1}_{\alpha'\widetilde{\otimes}\beta'}) \;\in { \mathfrak{G}}^{\widecheck{\mathfrak{S}}}\smallsetminus \overline{ \mathfrak{S}}.\;\;\;\;\;\;\;\;\;\;\;\;\;\;\;\;\;\;
\end{eqnarray}
%In the configuration $(iv)$ we note that 
%\begin{eqnarray}
%&&(\sigma_1\sqcap_{{}_{\overline{ \mathfrak{S}}}}\sigma_3)\sqcup_{{}_{{ \mathfrak{S}}}}(\sigma_2\sqcap_{{}_{\overline{ \mathfrak{S}}}}\sigma_4)\;\in { \mathfrak{G}}^{\check{\mathfrak{S}}}\smallsetminus { \mathfrak{G}}^{\widecheck{\mathfrak{S}}}\;\;\;\;\;\;\;\;\;\;\;\;\;\;\;\;\;\;\\
%&&(\sigma_1\sqcap_{{}_{\overline{ \mathfrak{S}}}}\sigma_4)\sqcup_{{}_{{ \mathfrak{S}}}}(\sigma_2\sqcap_{{}_{\overline{ \mathfrak{S}}}}\sigma_3)\;\in { \mathfrak{G}}^{\check{\mathfrak{S}}}\smallsetminus { \mathfrak{G}}^{\widecheck{\mathfrak{S}}}.\;\;\;\;\;\;\;\;\;\;\;\;\;\;\;\;\;\;
%\end{eqnarray}
This concludes the proof.
\end{proof}

\begin{remark}\label{remarkstarredplanes}
Let us formulate a simple remark justifying the introduction of starred planes.\\
In the configuration $(i)$ we have the following problem
\begin{eqnarray}
&&\hspace{-1cm}\chi:=(\sigma_1\sqcap_{{}_{\overline{ \mathfrak{S}}}}\sigma_3)\sqcup_{{}_{{ \mathfrak{S}}}}(\sigma_2\sqcap_{{}_{\overline{ \mathfrak{S}}}}\sigma_4)= (\Delta^{(j,k);\sigma_1}_{\alpha^\star\widetilde{\otimes}\beta^\star})^\star \sqcup_{{}_{{ \mathfrak{S}}}} (\Delta^{(j,k);\sigma_1}_{\alpha^\star\widetilde{\otimes}\beta'} \sqcap_{{}_{\overline{ \mathfrak{S}}}} \Delta^{(j,k);\sigma_1}_{\alpha'\widetilde{\otimes}\beta^\star})\;\in { \mathfrak{G}}^{\check{\mathfrak{S}}}\smallsetminus { \mathfrak{G}}^{\widecheck{\mathfrak{S}}}.\label{casproblematique}\;\;\;\;\;\;\;\;\;\;\;\;\;\;\;\;\;\;
%\\
%&&(\sigma_1\sqcap_{{}_{\overline{ \mathfrak{S}}}}\sigma_4)\sqcup_{{}_{{ \mathfrak{S}}}}(\sigma_2\sqcap_{{}_{\overline{ \mathfrak{S}}}}\sigma_3)=(\Delta^{(j,k);\sigma_1}_{\alpha^\star\widetilde{\otimes}\beta^\star})^\star \sqcup_{{}_{{ \mathfrak{S}}}} (\Delta^{(j,k);\sigma_1}_{\alpha^\star\widetilde{\otimes}\beta^\star} \sqcap_{{}_{\overline{ \mathfrak{S}}}} \Delta^{(j,k);\sigma_1}_{\alpha'\widetilde{\otimes}\beta'}) \;\in { \mathfrak{G}}^{\widecheck{\mathfrak{S}}}\smallsetminus \overline{ \mathfrak{S}}\;\;\;\;\;\;\;\;\;\;\;\;\;\;\;\;\;\;
\end{eqnarray}
%Although we have seen that $\chi:=(\sigma_1\sqcap_{{}_{\overline{ \mathfrak{S}}}}\sigma_3)\sqcup_{{}_{{ \mathfrak{S}}}}(\sigma_2\sqcap_{{}_{\overline{ \mathfrak{S}}}}\sigma_4)$ is in ${ \mathfrak{G}}^{\check{\mathfrak{S}}}\smallsetminus { \mathfrak{G}}^{\widecheck{\mathfrak{S}}}$ in equation (\ref{casproblematique}), 
We could still hope to have a property of the type $\forall \sigma\in \overline{\mathfrak{S}}{}^{{}^{pure}}\;\vert\; \sigma\asymp \chi,\;\; \exists \lambda\in {\mathfrak{S}}\;\vert\; \chi\sqcap_{{}_{{ \mathfrak{S}}}}\sigma \sqcoversubset_{{}_{{ \mathfrak{S}}}} \lambda\;\textit{\rm and}\; \chi\underline{\perp}\lambda$. In other words, we could still hope to have $\forall \mu\in \Theta^{\overline{ \mathfrak{S}}}(\chi), \exists \nu\in \Theta^{\overline{ \mathfrak{S}}}(\chi)$ such that $\mu^\star \sqcup_{{}_{{ \mathfrak{S}}}} \nu\in { \mathfrak{S}}$. But this point is false.  Indeed, we have for
\begin{eqnarray}
\Theta^{\overline{ \mathfrak{S}}}(\chi)=\left\{ (\alpha\widetilde{\otimes}\beta')\sqcap_{{}_{{{{ \overline{\mathfrak{S}}}}}}}\!\!\!(\alpha'\widetilde{\otimes}\beta)
\,,\,
(\alpha \widetilde{\otimes}\beta^\star)\sqcap_{{}_{{{{ \overline{\mathfrak{S}}}}}}} (\alpha^\star\widetilde{\otimes} \beta)\,,\,
(\alpha' \widetilde{\otimes}\beta^\star)\sqcap_{{}_{{{{ \overline{\mathfrak{S}}}}}}} (\alpha^\star\widetilde{\otimes} \beta')
\right\}\;\;\;\;\;\;\;\;\;
\end{eqnarray}
the following result
\begin{eqnarray}
&&cl_c^{\overline{ \mathfrak{S}}}\left\{ 
((\alpha\widetilde{\otimes}\beta')\sqcap_{{}_{{{{ \overline{\mathfrak{S}}}}}}}\!\!\!(\alpha'\widetilde{\otimes}\beta))^\star\,,\, ((\alpha \widetilde{\otimes}\beta^\star)\sqcap_{{}_{{{{ \overline{\mathfrak{S}}}}}}} (\alpha^\star\widetilde{\otimes} \beta))\right\}=\nonumber\\
&&= 
cl_c^{\overline{ \mathfrak{S}}}\left\{ 
 ((\alpha^\star\widetilde{\otimes}\bot)\sqcap_{{}_{{{{ \overline{\mathfrak{S}}}}}}}\!\!\!(\bot\widetilde{\otimes}\beta'{}^\star))
 \,,\,
 ((\bot\widetilde{\otimes}\beta^\star)\sqcap_{{}_{{{{ \overline{\mathfrak{S}}}}}}}\!\!\!(\alpha'{}^\star\widetilde{\otimes}\bot))
 \,,\,((\alpha \widetilde{\otimes}\beta^\star)\sqcap_{{}_{{{{ \overline{\mathfrak{S}}}}}}} (\alpha^\star\widetilde{\otimes} \beta))
 \right\}=\nonumber\\
 &&= 
cl_c^{\overline{ \mathfrak{S}}}\left\{ 
((\alpha^\star\widetilde{\otimes}\bot)\sqcap_{{}_{{{{ \overline{\mathfrak{S}}}}}}}\!\!\!(\bot\widetilde{\otimes}\beta'{}^\star))
\,,\,
(\alpha \widetilde{\otimes}\beta^\star)
\right\}
\end{eqnarray}
but we have also easily
\begin{eqnarray}
\left\{(\alpha^\star \widetilde{\otimes}\beta^\star)\,,\,(\alpha \widetilde{\otimes}\beta^\star)
\right\}
\sqsubseteq
cl_c^{\overline{ \mathfrak{S}}}\left\{ 
((\alpha^\star\widetilde{\otimes}\bot)\sqcap_{{}_{{{{ \overline{\mathfrak{S}}}}}}}\!\!\!(\bot\widetilde{\otimes}\beta'{}^\star))
\,,\,
(\alpha \widetilde{\otimes}\beta^\star)
\right\}
\end{eqnarray}
and then, as a consequence,
\begin{eqnarray}
cl_c^{\overline{ \mathfrak{S}}}\left\{ 
((\alpha\widetilde{\otimes}\beta')\sqcap_{{}_{{{{ \overline{\mathfrak{S}}}}}}}\!\!\!(\alpha'\widetilde{\otimes}\beta))^\star\,,\, ((\alpha \widetilde{\otimes}\beta^\star)\sqcap_{{}_{{{{ \overline{\mathfrak{S}}}}}}} (\alpha^\star\widetilde{\otimes} \beta))\right\}&\notin& {\mathcal{Q}}_c(\overline{ \mathfrak{S}}).
\end{eqnarray}
\end{remark}

\begin{lemma}\label{firstlemmawidecheck}
We will consider $\sigma_1,\sigma_2,\sigma_3,\sigma_4\in {\overline{\mathfrak{S}}}{}^{{}^{pure}}$ and $\lambda \in { \mathfrak{G}}^{\widecheck{\mathfrak{S}}}\smallsetminus {\overline{\mathfrak{S}}}$ and we will assume the existence of $U\in { \mathfrak{E}}_{\underline{\perp}}^{\check{\mathfrak{S}}}$ with $\lambda, \sigma_1,\sigma_2,\sigma_3,\sigma_4\in U$ and such that ${\mathfrak{r}}^U(\lambda,\sigma_1,\sigma_2)$ and ${\mathfrak{r}}^U(\lambda,\sigma_3,\sigma_4)$ and $\forall\omega,\nu,\rho\in \{\sigma_1,\sigma_2,\sigma_3,\sigma_4\}, \;\neg\; {\mathfrak{r}}^U(\omega,\nu,\rho)$. Endly, we will assume that the plane to which $\sigma_1,\sigma_2,\sigma_3,\sigma_4$ belong is not a starred plane.\\ 
Then, there exists $\mu\in {\overline{\mathfrak{S}}}{}^{{}^{pure}}$ and there exists $U'\in { \mathfrak{E}}_{\underline{\perp}}^{\check{\mathfrak{S}}}$ with $\mu, \sigma_1,\sigma_2,\sigma_3,\sigma_4\in U'$ such that ${\mathfrak{r}}^{U'}(\mu,\sigma_1,\sigma_3)$ and ${\mathfrak{r}}^{U'}(\mu,\sigma_2,\sigma_4)$ and there exists $\chi\in { \mathfrak{G}}^{\widecheck{\mathfrak{S}}}\smallsetminus {\overline{\mathfrak{S}}}$ and there exists $U''\in { \mathfrak{E}}_{\underline{\perp}}^{\check{\mathfrak{S}}}$ with $\chi, \sigma_1,\sigma_2,\sigma_3,\sigma_4\in U''$ such that ${\mathfrak{r}}^{U''}(\chi,\sigma_2,\sigma_3)$ and ${\mathfrak{r}}^{U''}(\chi,\sigma_1,\sigma_4)$, 
\underline{or} there exists $\mu\in {\overline{\mathfrak{S}}}{}^{{}^{pure}}$ and there exists $U'\in { \mathfrak{E}}_{\underline{\perp}}^{\check{\mathfrak{S}}}$ with $\mu, \sigma_1,\sigma_2,\sigma_3,\sigma_4\in U'$ such that ${\mathfrak{r}}^{U'}(\mu,\sigma_2,\sigma_3)$ and ${\mathfrak{r}}^{U'}(\mu,\sigma_1,\sigma_4)$ and there exists $\chi\in { \mathfrak{G}}^{\widecheck{\mathfrak{S}}}\smallsetminus {\overline{\mathfrak{S}}}$ and there exists $U''\in { \mathfrak{E}}_{\underline{\perp}}^{\check{\mathfrak{S}}}$ with $\chi, \sigma_1,\sigma_2,\sigma_3,\sigma_4\in U''$ such that ${\mathfrak{r}}^{U''}(\chi,\sigma_1,\sigma_3)$ and ${\mathfrak{r}}^{U''}(\chi,\sigma_2,\sigma_4)$.
\end{lemma}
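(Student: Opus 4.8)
The plan is to reduce the statement to a computation in the $(j,k)$ "slot" where $\sigma_1,\sigma_2,\sigma_3,\sigma_4$ differ, exactly as in the proof of Lemma \ref{caseanalysisUSpure}. First I would invoke Lemma \ref{natureorthocompleteU} and Lemma \ref{natureorthocompleteUbis} to pin down the structure of the orthogonally complete consistent subset $U$ containing $\lambda\in{ \mathfrak{G}}^{\widecheck{\mathfrak{S}}}\smallsetminus{\overline{\mathfrak{S}}}$ and the four pure states: since $\lambda\in{\widecheck{\mathfrak{S}}}\smallsetminus\overline{ \mathfrak{S}}$ and $\lambda\sqcoversupset_{{}_{{ \mathfrak{S}}}}(\sigma_1\sqcap_{{}_{\overline{ \mathfrak{S}}}}\sigma_2),(\sigma_3\sqcap_{{}_{\overline{ \mathfrak{S}}}}\sigma_4)$, Theorem \ref{lambdawr} and Theorem \ref{zetalambda} give an index set $I$ with $Card(I)=N-2$ on which all five states agree under the partial traces $\zeta^{{}^{{\mathfrak{S}}_{A_1}\cdots{\mathfrak{S}}_{A_N}}}_{(i)}$, $i\in I$, and two remaining indices $j<k$. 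I would then write $\Upsilon^{(j,k)}_{\cdot}$ and $\Delta^{(j,k);\sigma_1}_{\cdot}$ in the notation of Definition \ref{definUpsilonDelta}.

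Next, I would enumerate the possible configurations of $\Upsilon^{(j,k)}_{\lambda},\Upsilon^{(j,k)}_{\sigma_1},\ldots,\Upsilon^{(j,k)}_{\sigma_4}$ consistent with $\lambda\in{\widecheck{\mathfrak{S}}}$ and with ${\mathfrak{r}}^U(\lambda,\sigma_1,\sigma_2)$, ${\mathfrak{r}}^U(\lambda,\sigma_3,\sigma_4)$, $\neg{\mathfrak{r}}^U$ among $\sigma_1,\ldots,\sigma_4$. Because $\lambda\in{\widecheck{\mathfrak{S}}}$ (rather than general ${\check{\mathfrak{S}}}$), Lemma \ref{natureorthocompleteU} forces $\lambda=\mu^\star\sqcup_{{}_{{ \mathfrak{S}}}}(\mu\sqcap_{{}_{\overline{ \mathfrak{S}}}}\phi)$ for some $\mu,\phi\in\overline{ \mathfrak{S}}{}^{{}^{pure}}$ with $\mu\sqcap_{{}_{\overline{ \mathfrak{S}}}}\phi\sqcoversubset_{{}_{\overline{ \mathfrak{S}}}}\mu,\phi$, so one of the elements of $\Theta^{\overline{ \mathfrak{S}}}(\lambda)$ is "special" and appears as $\varphi_\gamma,\psi_\gamma$. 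Combined with the orthogonal-completeness conditions (\ref{orthogcompletecondition1})(\ref{orthogcompletecondition2}) and the hypothesis that the plane carrying $\sigma_1,\sigma_2,\sigma_3,\sigma_4$ is not a starred plane, the admissible configurations mirror cases $(ii)$ and $(iii)$ of Lemma \ref{caseanalysisUSpure} but now with $\lambda$ itself being the $\widecheck{\mathfrak{S}}$-element rather than an $\overline{\mathfrak{S}}$-pure state. In case $(ii)$-type configurations I expect the element $\mu$ realizing ${\mathfrak{r}}(\mu,\sigma_1,\sigma_3)$, ${\mathfrak{r}}(\mu,\sigma_2,\sigma_4)$ to come out a \emph{pure} state of $\overline{\mathfrak{S}}$ (because the two relevant $\Upsilon$'s in that slot differ only in one tensor factor, so the relevant $cl_c^{\overline{\mathfrak{S}}}$ collapses to a singleton), while the element $\chi$ realizing the other pair comes out in ${ \mathfrak{G}}^{\widecheck{\mathfrak{S}}}\smallsetminus{\overline{\mathfrak{S}}}$ via property (\ref{thirdcoveringpropertySbarvi}); in case $(iii)$-type configurations the roles are swapped. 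This accounts precisely for the disjunction in the statement.

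For each configuration I would then exhibit the witnesses explicitly: set $\mu$ (resp. $\chi$) to be the relevant infimum $\Delta^{(j,k);\sigma_1}_{\alpha\widetilde{\otimes}\beta'}\sqcap_{{}_{\overline{ \mathfrak{S}}}}\Delta^{(j,k);\sigma_1}_{\alpha'\widetilde{\otimes}\beta}$ or join $(\cdot)^\star\sqcup_{{}_{{ \mathfrak{S}}}}(\cdot)$, verify membership in $\overline{\mathfrak{S}}{}^{{}^{pure}}$ or in ${ \mathfrak{G}}^{\widecheck{\mathfrak{S}}}\smallsetminus{\overline{\mathfrak{S}}}$ using (\ref{thirdcoveringpropertySbari})(\ref{thirdcoveringpropertySbarii}) and Lemma \ref{natureorthocompleteU}, and check ${\mathfrak{r}}$-relations from the expansion formula (\ref{developmentetildeordersimplify}). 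Finally, I would construct the consistent subsets $U',U''$ by noting $\mu\asymp\sigma_1,\ldots,\sigma_4$ and $\chi\asymp\sigma_1,\ldots,\sigma_4$ (trivial from the $\Upsilon$-descriptions), extending $\{\mu,\sigma_1,\ldots,\sigma_4\}$ and $\{\chi,\sigma_1,\ldots,\sigma_4\}$ to maximal orthogonally complete subsets inside a containing element of ${ \mathfrak{E}}^{\check{\mathfrak{S}}}$, and checking orthogonal completeness using Lemma \ref{natureorthocompleteUbis}. The main obstacle will be the bookkeeping in the case analysis: keeping track, across configurations $(ii)$ and $(iii)$ and their symmetric counterparts under swapping $\sigma_3\leftrightarrow\sigma_4$, of which of the two "diagonal" elements lands in $\overline{\mathfrak{S}}{}^{{}^{pure}}$ and which in ${ \mathfrak{G}}^{\widecheck{\mathfrak{S}}}\smallsetminus{\overline{\mathfrak{S}}}$, and verifying in each case that the subset built around the $\widecheck{\mathfrak{S}}$-element can still be made orthogonally complete (this is where the hypothesis "not a starred plane" is essential — it is exactly what excludes the pathological configuration $(i)$ of Remark \ref{remarkstarredplanes} where the diagonal would fall in ${ \mathfrak{G}}^{\check{\mathfrak{S}}}\smallsetminus{ \mathfrak{G}}^{\widecheck{\mathfrak{S}}}$).
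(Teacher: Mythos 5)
Your proposal is correct and follows essentially the same route as the paper: reduce, via Lemma \ref{natureorthocompleteU} (together with Theorems \ref{zetalambda} and \ref{lambdawr}), to the explicit structure of $\Theta^{\overline{\mathfrak{S}}}(\lambda)$ in the two non-trivial tensor slots, run the configuration analysis, use orthogonal completeness and the non-starred-plane hypothesis to exclude the bad configuration, and read off that one diagonal pairing admits a pure common cover while the other admits a cover in ${ \mathfrak{G}}^{\widecheck{\mathfrak{S}}}\smallsetminus\overline{\mathfrak{S}}$, which is exactly the source of the disjunction. The only difference in execution is that the paper, once the pure witness $\mu$ is extracted from the structural properties (\ref{orthocompletestructure5})--(\ref{orthocompletestructure6}), obtains the second witness $\chi$ and its orthogonally complete subset simply by applying Lemma \ref{caseanalysisUSpure} to the now all-pure family $\{\mu,\sigma_1,\sigma_2,\sigma_3,\sigma_4\}$, which spares the direct coordinate construction and completeness check you anticipate (and note that for the type-2 subset around $\chi$ the relevant completeness criterion is the converse part of Lemma \ref{natureorthocompleteU}, not Lemma \ref{natureorthocompleteUbis}).
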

\begin{proof}
As long as $\lambda\in ({\check{\mathfrak{S}}}\smallsetminus \overline{ \mathfrak{S}})$, we have the results of Lemma \ref{natureorthocompleteU} 
and we can adopt the notations of this lemma.  %We know that $\exists I\subseteq\{1,\cdots,N\}, Card(I)=N-2\;\vert\; \forall i\in I, \forall \sigma,\kappa\in \{\sigma_1,\sigma_2,\sigma_3,\sigma_4\},\;\;\zeta^{{}^{{\mathfrak{S}}_{A_1}\cdots{\mathfrak{S}}_{A_N}}}_{(i)}(\sigma) = \zeta^{{}^{{\mathfrak{S}}_{A_1}\cdots{\mathfrak{S}}_{A_N}}}_{(i)}(\kappa)$ and we will introduce $j$ and $k$ in $\{1,\cdots,N\}\smallsetminus I$ such that $j<k$.

We note that the conditions ${\mathfrak{r}}^U(\lambda,\sigma_1,\sigma_2)$ and ${\mathfrak{r}}^U(\lambda,\sigma_3,\sigma_4)$ and $\forall\omega,\nu,\rho\in \{\sigma_1,\sigma_2,\sigma_3,\sigma_4\}, \;\neg\; {\mathfrak{r}}^U(\omega,\nu,\rho)$ impose the existence of $\delta,\epsilon\in \Theta^{\overline{\mathfrak{S}}}(\lambda)$ such that $\epsilon\not=\delta$ and $(\sigma_1\sqcap_{{}_{\overline{\mathfrak{S}}}}\sigma_2)=\epsilon$ and $(\sigma_3\sqcap_{{}_{\overline{\mathfrak{S}}}}\sigma_4)=\delta$.\\

Let us now distinguish the different cases exhibited from the condition of orthogonal completeness.  We denote as usual $\gamma$ the unique element of $\Theta^{\overline{ \mathfrak{S}}}(\lambda)$ such that $\lambda=\varphi_\gamma^\star\sqcup\gamma$ and $\gamma:=(\varphi_\gamma\sqcap_{{}_{{\overline{ \mathfrak{S}}}}}\psi_\gamma)\sqcoversubset_{{}_{{\overline{ \mathfrak{S}}}}}\varphi_\gamma,\psi_\gamma$.\\

\noindent If $\sigma_2\sqsupseteq_{{}_{\overline{ \mathfrak{S}}}} (\sigma_1\sqcap_{{}_{\overline{ \mathfrak{S}}}}\sigma_3)^\star$ we are then in one of the two following configurations \\ 
(i) $\gamma=\delta$ and then $\varphi_\gamma=\sigma_3$,  $\psi_\gamma=\sigma_4$ and $\{\varphi_\epsilon,\psi_\epsilon\}=\{\sigma_1,\sigma_2\}$, \\
(ii) $\gamma\not=\delta,\gamma\not=\epsilon$ and then $\varphi_\delta=\sigma_3$,$\psi_\delta=\sigma_4$,$\varphi_\epsilon=\sigma_2$, $\psi_\epsilon=\sigma_1$.\\

In the configuration (i) and if $\sigma_1=\varphi_\epsilon$ , using (\ref{orthocompletestructure6}), we deduce that there exists $\mu:=\mu_{\epsilon\gamma}\in {\overline{\mathfrak{S}}}{}^{{}^{pure}}$ such that $(\sigma_1\sqcap_{{}_{\overline{ \mathfrak{S}}}}\sigma_3)=(\varphi_{\gamma}\sqcap_{{}_{\overline{ \mathfrak{S}}}}\varphi_{\epsilon})\sqcoversubset_{{}_{\overline{ \mathfrak{S}}}}\mu$ and 
$(\sigma_2\sqcap_{{}_{\overline{ \mathfrak{S}}}}\sigma_4)=(\psi_{\gamma}\sqcap_{{}_{\overline{ \mathfrak{S}}}}\psi_{\epsilon})\sqcoversubset_{{}_{\overline{ \mathfrak{S}}}}\mu$. Moreover, we have $\mu_{\alpha\gamma}\underline{\perp}\psi_{\epsilon}$, i.e. $\mu\underline{\perp}\sigma_2$, and $\varphi_\gamma\underline{\perp}\varphi_\epsilon$, i.e. $\sigma_1\underline{\perp}\sigma_3$.  As a result, there exists $U'\in { \mathfrak{E}}_{\underline{\perp}}^{\check{\mathfrak{S}}}$ with $\mu, \sigma_1,\sigma_2,\sigma_3,\sigma_4\in U'$ such that ${\mathfrak{r}}^{U'}(\mu,\sigma_1,\sigma_3)$ and ${\mathfrak{r}}^{U'}(\mu,\sigma_2,\sigma_4)$. Now we can use Lemma \ref{caseanalysisUSpure} to deduce that there exists $U''\in { \mathfrak{E}}_{\underline{\perp}}^{\check{\mathfrak{S}}}$ and $\chi\in { \mathfrak{G}}^{\widecheck{\mathfrak{S}}}\smallsetminus {\overline{\mathfrak{S}}}$ with $\chi, \sigma_1,\sigma_2,\sigma_3,\sigma_4\in U''$ such that ${\mathfrak{r}}^{U''}(\chi,\sigma_2,\sigma_3)$ and ${\mathfrak{r}}^{U''}(\chi,\sigma_1,\sigma_4)$.\\

In the configuration (i) and if $\sigma_1=\psi_\epsilon$ , using (\ref{orthocompletestructure6}), we deduce analogously that there exists $\mu\in {\overline{\mathfrak{S}}}{}^{{}^{pure}}$ and $U'\in { \mathfrak{E}}_{\underline{\perp}}^{\check{\mathfrak{S}}}$ with $\mu, \sigma_1,\sigma_2,\sigma_3,\sigma_4\in U'$ such that ${\mathfrak{r}}^{U'}(\mu,\sigma_2,\sigma_3)$ and ${\mathfrak{r}}^{U'}(\mu,\sigma_1,\sigma_4)$. Now we can use Lemma \ref{caseanalysisUSpure} to deduce that there exists $U''\in { \mathfrak{E}}_{\underline{\perp}}^{\check{\mathfrak{S}}}$ and $\chi\in { \mathfrak{G}}^{\widecheck{\mathfrak{S}}}\smallsetminus {\overline{\mathfrak{S}}}$ with $\chi, \sigma_1,\sigma_2,\sigma_3,\sigma_4\in U''$ such that ${\mathfrak{r}}^{U''}(\chi,\sigma_1,\sigma_3)$ and ${\mathfrak{r}}^{U''}(\chi,\sigma_2,\sigma_4)$.\\

In the configuration (ii), using (\ref{orthocompletestructure5}), we deduce that there exists $\mu:=\lambda_{\epsilon\delta}\in {\overline{\mathfrak{S}}}{}^{{}^{pure}}$ such that $(\sigma_1\sqcap_{{}_{\overline{ \mathfrak{S}}}}\sigma_3)=(\varphi_{\delta}\sqcap_{{}_{\overline{ \mathfrak{S}}}}\psi_{\epsilon})\sqcoversubset_{{}_{\overline{ \mathfrak{S}}}}\mu$ and 
$(\sigma_2\sqcap_{{}_{\overline{ \mathfrak{S}}}}\sigma_4)=(\psi_{\delta}\sqcap_{{}_{\overline{ \mathfrak{S}}}}\varphi_{\epsilon})\sqcoversubset_{{}_{\overline{ \mathfrak{S}}}}\mu$.  Moreover,  we have $\mu=\lambda_{\epsilon\delta}\underline{\perp}\varphi_{\epsilon}=\sigma_2$ and $\mu=\lambda_{\epsilon\delta}\underline{\perp}\varphi_{\delta}=\sigma_3$. In other words, there exists $U'\in { \mathfrak{E}}_{\underline{\perp}}^{\check{\mathfrak{S}}}$ with $\mu, \sigma_1,\sigma_2,\sigma_3,\sigma_4\in U'$ such that ${\mathfrak{r}}^{U'}(\mu,\sigma_1,\sigma_3)$ and ${\mathfrak{r}}^{U'}(\mu,\sigma_2,\sigma_4)$. 
 Now we can use Lemma \ref{caseanalysisUSpure} to deduce that there exists $U''\in { \mathfrak{E}}_{\underline{\perp}}^{\check{\mathfrak{S}}}$ and $\chi\in { \mathfrak{G}}^{\widecheck{\mathfrak{S}}}\smallsetminus {\overline{\mathfrak{S}}}$ with $\chi, \sigma_1,\sigma_2,\sigma_3,\sigma_4\in U''$ such that ${\mathfrak{r}}^{U''}(\chi,\sigma_2,\sigma_3)$ and ${\mathfrak{r}}^{U''}(\chi,\sigma_1,\sigma_4)$.\\

\noindent If we have $\sigma_3\sqsupseteq_{{}_{\overline{ \mathfrak{S}}}} (\sigma_2\sqcap_{{}_{\overline{ \mathfrak{S}}}}\sigma_4)^\star$ or $\sigma_1\sqsupseteq_{{}_{\overline{ \mathfrak{S}}}} (\sigma_2\sqcap_{{}_{\overline{ \mathfrak{S}}}}\sigma_4)^\star$ or $\sigma_4\sqsupseteq_{{}_{\overline{ \mathfrak{S}}}} (\sigma_1\sqcap_{{}_{\overline{ \mathfrak{S}}}}\sigma_3)^\star$, and using the natural symmetries of the assumption ($\sigma_1 \leftrightarrow \sigma_2$, $\sigma_3 \leftrightarrow \sigma_4$, $(\sigma_1,\sigma_2) \leftrightarrow (\sigma_3,\sigma_4)$), we obtain the same result.\\

\noindent If we have $\sigma_1\sqsupseteq_{{}_{\overline{ \mathfrak{S}}}} (\sigma_3\sqcap_{{}_{\overline{ \mathfrak{S}}}}\sigma_4)^\star$, the situation is even simpler. Indeed, we have necessarily $\gamma=\delta$ and $\varphi_\gamma=\sigma_1$,$\psi_\gamma=\sigma_2$,$\{\varphi_\epsilon, \psi_\epsilon\}=\{\sigma_3,\sigma_4\}$.\\
If $\sigma_3=\varphi_\epsilon$ , using (\ref{orthocompletestructure6}), we deduce that there exists $\mu:=\mu_{\epsilon\gamma}\in {\overline{\mathfrak{S}}}{}^{{}^{pure}}$ such that $(\sigma_1\sqcap_{{}_{\overline{ \mathfrak{S}}}}\sigma_3)=(\varphi_{\gamma}\sqcap_{{}_{\overline{ \mathfrak{S}}}}\varphi_{\epsilon})\sqcoversubset_{{}_{\overline{ \mathfrak{S}}}}\mu$ and 
$(\sigma_2\sqcap_{{}_{\overline{ \mathfrak{S}}}}\sigma_4)=(\psi_{\gamma}\sqcap_{{}_{\overline{ \mathfrak{S}}}}\psi_{\epsilon})\sqcoversubset_{{}_{\overline{ \mathfrak{S}}}}\mu$. Moreover, we have $\mu_{\alpha\gamma}\underline{\perp}\psi_{\epsilon}$, i.e. $\mu\underline{\perp}\sigma_4$, and $\varphi_\gamma\underline{\perp}\varphi_\epsilon$, i.e. $\sigma_1\underline{\perp}\sigma_3$.  As a result, there exists $U'\in { \mathfrak{E}}_{\underline{\perp}}^{\check{\mathfrak{S}}}$ with $\mu, \sigma_1,\sigma_2,\sigma_3,\sigma_4\in U'$ such that ${\mathfrak{r}}^{U'}(\mu,\sigma_1,\sigma_3)$ and ${\mathfrak{r}}^{U'}(\mu,\sigma_2,\sigma_4)$. Now we can use Lemma \ref{caseanalysisUSpure} to deduce that there exists $U''\in { \mathfrak{E}}_{\underline{\perp}}^{\check{\mathfrak{S}}}$ and $\chi\in { \mathfrak{G}}^{\widecheck{\mathfrak{S}}}\smallsetminus {\overline{\mathfrak{S}}}$ with $\chi, \sigma_1,\sigma_2,\sigma_3,\sigma_4\in U''$ such that ${\mathfrak{r}}^{U''}(\chi,\sigma_2,\sigma_3)$ and ${\mathfrak{r}}^{U''}(\chi,\sigma_1,\sigma_4)$.\\

If $\sigma_4=\varphi_\epsilon$ , using (\ref{orthocompletestructure6}), we deduce analogously that there exists $\mu\in {\overline{\mathfrak{S}}}{}^{{}^{pure}}$ and $U'\in { \mathfrak{E}}_{\underline{\perp}}^{\check{\mathfrak{S}}}$ with $\mu, \sigma_1,\sigma_2,\sigma_3,\sigma_4\in U'$ such that ${\mathfrak{r}}^{U'}(\mu,\sigma_2,\sigma_3)$ and ${\mathfrak{r}}^{U'}(\mu,\sigma_1,\sigma_4)$. Now we can use Lemma \ref{caseanalysisUSpure} to deduce that there exists $U''\in { \mathfrak{E}}_{\underline{\perp}}^{\check{\mathfrak{S}}}$ and $\chi\in { \mathfrak{G}}^{\widecheck{\mathfrak{S}}}\smallsetminus {\overline{\mathfrak{S}}}$ with $\chi, \sigma_1,\sigma_2,\sigma_3,\sigma_4\in U''$ such that ${\mathfrak{r}}^{U''}(\chi,\sigma_1,\sigma_3)$ and ${\mathfrak{r}}^{U''}(\chi,\sigma_2,\sigma_4)$.\\

\noindent If we have $\sigma_2\sqsupseteq_{{}_{\overline{ \mathfrak{S}}}} (\sigma_3\sqcap_{{}_{\overline{ \mathfrak{S}}}}\sigma_4)^\star$ or $\sigma_3\sqsupseteq_{{}_{\overline{ \mathfrak{S}}}} (\sigma_1\sqcap_{{}_{\overline{ \mathfrak{S}}}}\sigma_2)^\star$ or $\sigma_4\sqsupseteq_{{}_{\overline{ \mathfrak{S}}}} (\sigma_1\sqcap_{{}_{\overline{ \mathfrak{S}}}}\sigma_2)^\star$, and using the natural symmetries of the assumption ($\sigma_1 \leftrightarrow \sigma_2$, $\sigma_3 \leftrightarrow \sigma_4$, $(\sigma_1,\sigma_2) \leftrightarrow (\sigma_3,\sigma_4)$), we obtain the same result.\\
\end{proof}

Let us now summarize the main result of previous lemmas in the following theorem.

\begin{theorem}{\bf [restricted Veblen-Young's third geometric property]}\label{theoremwidecheck}\\
Let us consider $U\in { \mathfrak{E}}_{\underline{\perp}}^{\check{\mathfrak{S}}}$ and $\lambda, \sigma_1,\sigma_2,\sigma_3,\sigma_4\in U$ such that ${\mathfrak{r}}^U(\lambda,\sigma_1,\sigma_2)$ and ${\mathfrak{r}}^U(\lambda,\sigma_3,\sigma_4)$ and $\forall\mu,\nu,\rho\in \{\sigma_1,\sigma_2,\sigma_3,\sigma_4\}, \;\neg\; {\mathfrak{r}}^U(\mu,\nu,\rho)$.  Endly, we will assume that the plane to which $\sigma_1,\sigma_2,\sigma_3,\sigma_4$ belong is not a starred plane.\\
From Lemma \ref{nondegenerateallsigmapure}, we know that $\sigma_1,\sigma_2,\sigma_3,\sigma_4\in \overline{\mathfrak{S}}{}^{{}^{pure}}$. From Lemma \ref{natureorthocompleteU}, we know that, as long as $U$ is orthogonally complete, we have $\lambda\in {\mathfrak{G}}^{\widecheck{\mathfrak{S}}}$. Endly, from Lemma \ref{firstlemmawidecheck} and Lemma \ref{caseanalysisUSpure}, we know that there exists $U'\in { \mathfrak{E}}_{\underline{\perp}}^{\check{\mathfrak{S}}}$ with $\sigma_1,\sigma_2,\sigma_3,\sigma_4\in U'$ and $\lambda'\in U'\cap {\mathfrak{G}}^{\widecheck{\mathfrak{S}}}$ satisfying ${\mathfrak{r}}^{U'}(\lambda',\sigma_1,\sigma_3)$ and ${\mathfrak{r}}^{U'}(\lambda',\sigma_2,\sigma_4)$, and there exists $U''\in { \mathfrak{E}}_{\underline{\perp}}^{\check{\mathfrak{S}}},\sigma_1,\sigma_2,\sigma_3,\sigma_4\in U''$ and $\lambda''\in U''\cap {\mathfrak{G}}^{\widecheck{\mathfrak{S}}}$ satisfying ${\mathfrak{r}}^{U''}(\lambda'',\sigma_1,\sigma_4)$ and ${\mathfrak{r}}^{U''}(\lambda'',\sigma_2,\sigma_3)$.  
\end{theorem}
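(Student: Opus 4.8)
\textbf{Proof plan for Theorem \ref{theoremwidecheck}.} The statement is essentially a synthesis of the three lemmas immediately preceding it, so the plan is to assemble them carefully while checking that each invocation applies with its hypotheses verified. First I would fix the data: a maximal orthogonally complete consistent subset $U \in { \mathfrak{E}}_{\underline{\perp}}^{\check{\mathfrak{S}}}$ together with $\lambda, \sigma_1,\sigma_2,\sigma_3,\sigma_4 \in U$ satisfying ${\mathfrak{r}}^U(\lambda,\sigma_1,\sigma_2)$, ${\mathfrak{r}}^U(\lambda,\sigma_3,\sigma_4)$, and the non-colinearity condition on the $\sigma_i$, plus the assumption that the common plane is not starred. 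The first step is to apply Lemma \ref{nondegenerateallsigmapure}: the configuration of $\lambda,\sigma_1,\sigma_2,\sigma_3,\sigma_4$ (with $\lambda\sqcoversupset_{{}_{{\mathfrak{S}}}}(\sigma_1\sqcap_{{}_{{\mathfrak{S}}}}\sigma_2)$ and $\lambda\sqcoversupset_{{}_{{\mathfrak{S}}}}(\sigma_3\sqcap_{{}_{{\mathfrak{S}}}}\sigma_4)$, all distinct, all pairwise infima distinct — the last point following from the non-colinearity assumption) is precisely the hypothesis of that lemma, so $\sigma_1,\sigma_2,\sigma_3,\sigma_4 \in \overline{\mathfrak{S}}{}^{{}^{pure}}$.

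Second, I would handle $\lambda$. If $\lambda \in \overline{\mathfrak{S}}{}^{{}^{pure}}$ there is nothing to do. Otherwise $\lambda \in {\check{\mathfrak{S}}}\smallsetminus\overline{\mathfrak{S}}$, and since $U$ is of consistent type 2 (it contains two distinct pure states $\sigma_i$ whose meets with $\lambda$ are distinct elements $\epsilon\not=\delta$ of $\Theta^{\overline{\mathfrak{S}}}(\lambda)$, by the non-colinearity condition), Lemma \ref{natureorthocompleteU} forces $\lambda \in {\widecheck{\mathfrak{S}}}\smallsetminus\overline{\mathfrak{S}} \subseteq {\mathfrak{G}}^{\widecheck{\mathfrak{S}}}$, and moreover gives the explicit structure (\ref{orthocompletestructure0})--(\ref{orthocompletestructure6}) for $U$. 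Third, I would invoke Lemma \ref{firstlemmawidecheck} with the hypotheses just established ($\sigma_i \in \overline{\mathfrak{S}}{}^{{}^{pure}}$, $\lambda \in {\mathfrak{G}}^{\widecheck{\mathfrak{S}}}\smallsetminus\overline{\mathfrak{S}}$, the colinearity data, and the non-starred-plane assumption): this produces, in one of two mirror cases, an element $\mu \in \overline{\mathfrak{S}}{}^{{}^{pure}}$ (hence in ${\mathfrak{G}}^{\widecheck{\mathfrak{S}}}$) realizing one of the two "crossed" colinearity pairs inside some orthogonally complete $U'$, together with an element $\chi \in {\mathfrak{G}}^{\widecheck{\mathfrak{S}}}\smallsetminus\overline{\mathfrak{S}}$ realizing the other crossed pair inside some $U''$; and Lemma \ref{caseanalysisUSpure} is what is used inside Lemma \ref{firstlemmawidecheck} to get the second element, so both $\lambda'$ and $\lambda''$ land in ${\mathfrak{G}}^{\widecheck{\mathfrak{S}}}$.

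The fourth step is purely bookkeeping: rename the outputs of Lemma \ref{firstlemmawidecheck} as $\lambda'$ (realizing ${\mathfrak{r}}^{U'}(\lambda',\sigma_1,\sigma_3)$, ${\mathfrak{r}}^{U'}(\lambda',\sigma_2,\sigma_4)$) and $\lambda''$ (realizing ${\mathfrak{r}}^{U''}(\lambda'',\sigma_1,\sigma_4)$, ${\mathfrak{r}}^{U''}(\lambda'',\sigma_2,\sigma_3)$), using the symmetry between the two cases of that lemma (swap the roles of $\sigma_3,\sigma_4$) so that both crossed pairs are covered regardless of which case occurred. The existence of the consistent subsets $U',U''$ containing $\{\sigma_1,\sigma_2,\sigma_3,\sigma_4\}$ together with the new vertex is guaranteed because each new vertex is pairwise consistent with the $\sigma_i$ (checked in the lemmas via the $\asymp$ relation) and every orthogonally complete consistent set extends to a maximal one in ${ \mathfrak{E}}_{\underline{\perp}}^{\check{\mathfrak{S}}}$.

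The main obstacle I anticipate is not in any single invocation but in verifying that the hypotheses of the three lemmas are genuinely met in the generality claimed — in particular, confirming that the non-colinearity condition $\forall\mu,\nu,\rho\in\{\sigma_1,\sigma_2,\sigma_3,\sigma_4\},\,\neg\,{\mathfrak{r}}^U(\mu,\nu,\rho)$ implies the distinctness of all six pairwise infima $(\sigma_i\sqcap_{{}_{{\mathfrak{S}}}}\sigma_j)$ required by Lemma \ref{nondegenerateallsigmapure}, and that the "non-starred-plane" hypothesis is exactly the one needed to exclude the problematic configuration $(i)$ of Lemma \ref{caseanalysisUSpure} (cf. Remark \ref{remarkstarredplanes}, which shows the crossed vertex can fall outside ${\mathfrak{G}}^{\widecheck{\mathfrak{S}}}$ precisely in a starred plane). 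Once those two compatibility checks are made explicit, the theorem is an immediate corollary of Lemmas \ref{nondegenerateallsigmapure}, \ref{natureorthocompleteU}, \ref{caseanalysisUSpure} and \ref{firstlemmawidecheck}.
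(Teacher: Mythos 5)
Your proposal is correct and follows essentially the same route as the paper: the theorem is proved there precisely by chaining Lemma \ref{nondegenerateallsigmapure} (purity of the $\sigma_i$), Lemma \ref{natureorthocompleteU} ($\lambda\in{\mathfrak{G}}^{\widecheck{\mathfrak{S}}}$ via orthogonal completeness), and Lemmas \ref{firstlemmawidecheck} and \ref{caseanalysisUSpure} (existence of the crossed vertices $\lambda',\lambda''$ in ${\mathfrak{G}}^{\widecheck{\mathfrak{S}}}$), which is exactly your assembly. Your added hypothesis-compatibility checks (distinctness of infima for Lemma \ref{nondegenerateallsigmapure}, and the role of the non-starred-plane assumption in excluding the bad configuration of Remark \ref{remarkstarredplanes}) are sound refinements of the same argument.
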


\begin{theorem}{\bf [Orthogonality properties]}\label{theoremorthoproj}\\
The restriction to ${\mathfrak{G}}^{\widecheck{\mathfrak{S}}}$ of the orthogonality relation $\underline{\perp}$ satisfies the following properties.
\begin{eqnarray}
\hspace{-1cm}\forall \alpha,\beta\in {\mathfrak{G}}^{\widecheck{\mathfrak{S}}},&& \alpha \underline{\perp}\beta \;\;\Rightarrow\;\; \alpha\not=\beta,\label{propO1}\\
\forall \alpha,\beta\in {\mathfrak{G}}^{\widecheck{\mathfrak{S}}},&& \alpha \underline{\perp}\beta \;\;\Rightarrow\;\; \beta \underline{\perp}\alpha,\label{propO2}\\
\forall U\in { \mathfrak{E}}_{\underline{\perp}}^{\check{\mathfrak{S}}}, \forall \alpha,\beta,\epsilon,\delta\in U,&& (\,\alpha\not=\beta,\;\;\alpha \underline{\perp}\epsilon,\;\;\beta \underline{\perp}\epsilon,\;\;{\mathfrak{r}}^U(\delta,\alpha,\beta)\,)\;\; \Rightarrow \;\; (\,\epsilon \underline{\perp} \delta\,),\;\;\;\;\;\;\;\;\;\;\;\;\;\label{propO3}\\
\forall \alpha,\beta\in {\mathfrak{G}}^{\widecheck{\mathfrak{S}}},\;\alpha\not=\beta,\alpha\asymp\beta,&& \exists U\in { \mathfrak{E}}_{\underline{\perp}}^{\check{\mathfrak{S}}}, \alpha,\beta\in U, \exists \epsilon\in U\;\;\vert\;\; {\mathfrak{r}}^U(\epsilon,\alpha,\beta)\;\textit{\rm and}\; \epsilon \underline{\perp} \alpha.\label{propO4}\;\;\;\;\;\;\;\;\;\;\;\;\;\;\;
\end{eqnarray}
\end{theorem}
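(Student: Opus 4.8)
The plan is to prove the four orthogonality properties (\ref{propO1})--(\ref{propO4}) by unwinding the definition of $\underline{\perp}$ on ${\mathfrak{G}}^{\widecheck{\mathfrak{S}}}$, which is the restriction of the orthogonality relation of subsection \ref{firstresultsrealstructures}, together with the explicit description of $\Theta^{\overline{\mathfrak{S}}}$ on elements of ${\widecheck{\mathfrak{S}}}\smallsetminus\overline{\mathfrak{S}}$ furnished by Theorem \ref{lambdawr} and the various covering properties (\ref{thirdcoveringpropertySbarii})--(\ref{thirdcoveringpropertySbarvi}). Properties (\ref{propO1}) and (\ref{propO2}) are immediate: irreflexivity follows from the fact that $\underline{\perp}$ on ${\mathfrak{S}}$ is irreflexive (a witness $\omega$ with $\sigma\sqsupseteq\omega$ and $\sigma\sqsupseteq\omega^\star$ would give $\widehat{\omega\omega^\star}$, contradicting (\ref{starcomplement})), and symmetry is built into Definition \ref{definorthoS} via the symmetric role of $\omega$ and $\omega^\star$ after applying the involution (\ref{involutive}).

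The substantive work is in (\ref{propO3}) and (\ref{propO4}). For (\ref{propO3}), I would fix $U\in{ \mathfrak{E}}_{\underline{\perp}}^{\check{\mathfrak{S}}}$ and $\alpha,\beta,\epsilon,\delta\in U$ with $\alpha\neq\beta$, $\alpha\underline{\perp}\epsilon$, $\beta\underline{\perp}\epsilon$, and ${\mathfrak{r}}^U(\delta,\alpha,\beta)$, and case-split on which of $\alpha,\beta,\delta$ lie in $\overline{\mathfrak{S}}{}^{{}^{pure}}$ versus ${\widecheck{\mathfrak{S}}}\smallsetminus\overline{\mathfrak{S}}$. When $\delta\in{\widecheck{\mathfrak{S}}}\smallsetminus\overline{\mathfrak{S}}$, the colinearity ${\mathfrak{r}}^U(\delta,\alpha,\beta)$ gives $(\alpha\sqcap_{{}_{{\mathfrak{S}}}}\beta)\sqcoversubset_{{}_{{\mathfrak{S}}}}\delta$, hence $\alpha\sqcup_{{}_{{\mathfrak{S}}}}\beta=\delta$ by Lemma \ref{lemmazetacup} (after checking $\{\alpha,\beta\}\subseteq\Theta^{\overline{\mathfrak{S}}}(\delta)$ when both are pure, or the analogous statement using (\ref{Rek1})/(\ref{Rek2}) when one is hidden). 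Then $\epsilon$ orthogonal to both $\alpha$ and $\beta$ forces, via the structure of $\{\epsilon\}^{\underline{\perp}}=\bigcup_{\omega\in\Theta^{\overline{\mathfrak{S}}}(\epsilon)}(\uparrow^{{}^{{\mathfrak{S}}}}\omega^\star)$ of Lemma \ref{lemmaSorthogonal}, that there is a common witness $\omega$ with $\omega^\star$ below both $\alpha$ and $\beta$, hence below $\delta=\alpha\sqcup_{{}_{{\mathfrak{S}}}}\beta$; I would then have to verify that this $\omega$ actually sits in $\Theta^{\overline{\mathfrak{S}}}(\epsilon)$ against the right component, using the orthogonal-completeness conditions (\ref{orthogcompletecondition1})--(\ref{orthogcompletecondition2}) that $U$ satisfies and the explicit lists in Lemma \ref{natureorthocompleteU} (formulas (\ref{orthocompletestructure0})--(\ref{orthocompletestructure6})) and Lemma \ref{natureorthocompleteUbis}, from which the orthogonality relations inside such a $U$ are read off directly. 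The pure-state sub-cases are handled in the same way using (\ref{secondcoveringpropertySbar}) and (\ref{zetalambda}).

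For (\ref{propO4}), I would take $\alpha\neq\beta$ in ${\mathfrak{G}}^{\widecheck{\mathfrak{S}}}$ with $\alpha\asymp\beta$ and exhibit $\epsilon$ and a suitable orthogonally complete $U$. The natural candidate is $\epsilon:=\alpha^\star\sqcup_{{}_{{\mathfrak{S}}}}(\alpha\sqcap_{{}_{\overline{\mathfrak{S}}}}\beta)$ (or the analogous supremum built from a shared element of $\Theta^{\overline{\mathfrak{S}}}(\alpha)\cap\Theta^{\overline{\mathfrak{S}}}(\beta)$ when one of $\alpha,\beta$ is hidden); existence of this supremum in ${\mathfrak{S}}$ comes from (\ref{thirdcoveringpropertySbari}) and (\ref{thirdcoveringpropertySbarvi}), and these are precisely the elements that populate ${\widecheck{\mathfrak{S}}}$ by definition (\ref{defSwidecheck}). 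By (\ref{thirdcoveringpropertySbarii}) one gets $\epsilon\sqcoversupset_{{}_{{\mathfrak{S}}}}(\alpha\sqcap_{{}_{{\mathfrak{S}}}}\beta)$, i.e. ${\mathfrak{r}}^U(\epsilon,\alpha,\beta)$, and $\epsilon\underline{\perp}\alpha$ because $\alpha^\star\sqsubseteq_{{}_{{\mathfrak{S}}}}\epsilon$ directly witnesses the orthogonality via Definition \ref{definorthoS}. The remaining point — that $\{\alpha,\beta,\epsilon\}$ extends to some $U\in{ \mathfrak{E}}_{\underline{\perp}}^{\check{\mathfrak{S}}}$ — follows because these three are pairwise consistent, so they lie in a maximal consistent subset $V\in{ \mathfrak{E}}^{\check{\mathfrak{S}}}$, and then Lemma \ref{natureorthocompleteUbis} (cases (1)/(2), choosing the appropriate auxiliary $\chi'$) produces an orthogonally complete subset of $V$ containing them.

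The main obstacle I expect is the bookkeeping in (\ref{propO3}): when some of $\alpha,\beta,\delta$ are hidden states, one must keep precise track of which maximal consistent subsets are \emph{orthogonally} complete (elements of ${ \mathfrak{E}}_{\underline{\perp}}^{\check{\mathfrak{S}}}$, not merely ${ \mathfrak{E}}^{\check{\mathfrak{S}}}$), since the orthogonality data inside $U$ is rigidly constrained by Lemma \ref{natureorthocompleteU} and Lemma \ref{natureorthocompleteUbis}, and the argument that the witness $\omega$ of $\epsilon\underline{\perp}\alpha$ and $\epsilon\underline{\perp}\beta$ is \emph{one and the same} relies on the uniqueness statements ($\existunique$ of $\gamma$) embedded in those lemmas. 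Getting the case enumeration exhaustive — distinguishing $\delta=\gamma$ from $\delta\neq\gamma$, consistent type 1 from type 2, and starred from non-starred planes — is where the real care is needed; everything else reduces to applications of the covering identities already established.
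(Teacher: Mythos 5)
Your treatment of (\ref{propO1}), (\ref{propO2}) and (\ref{propO4}) is essentially the paper's: for (\ref{propO4}) the paper proceeds exactly as you do, by cases on whether $\alpha,\beta$ are pure or hidden, taking $\epsilon:=\alpha^\star\sqcup_{{}_{{\mathfrak{S}}}}(\alpha\sqcap_{{}_{\overline{\mathfrak{S}}}}\beta)$ when $\alpha$ is pure, and, when $\alpha$ is hidden, taking $\epsilon:=\varphi_\gamma$ if $\beta\sqcap_{{}_{\overline{\mathfrak{S}}}}\alpha=\gamma$ and $\epsilon:=(\beta\sqcap_{{}_{\overline{\mathfrak{S}}}}\alpha)\sqcup_{{}_{{\mathfrak{S}}}}\gamma^\star$ otherwise, with existence from (\ref{thirdcoveringpropertySbari}) and (\ref{thirdcoveringpropertySbarvi}) and covering from (\ref{thirdcoveringpropertySbarii}); your extra remark about extending $\{\alpha,\beta,\epsilon\}$ to an orthogonally complete subset is harmless.

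The problem is your route to (\ref{propO3}). First, the intermediate claim that ${\mathfrak{r}}^U(\delta,\alpha,\beta)$ with $\delta$ hidden gives $\{\alpha,\beta\}\subseteq\Theta^{\overline{\mathfrak{S}}}(\delta)$ and hence $\delta=\alpha\sqcup_{{}_{{\mathfrak{S}}}}\beta$ via Lemma \ref{lemmazetacup} is false in the typical case: when $\alpha,\beta$ are pure and $\delta\in{\widecheck{\mathfrak{S}}}\smallsetminus\overline{\mathfrak{S}}$, the element of $\Theta^{\overline{\mathfrak{S}}}(\delta)$ is $(\alpha\sqcap_{{}_{\overline{\mathfrak{S}}}}\beta)$, not $\alpha$ or $\beta$; a pure state can never lie below a hidden state (the elements of $\Theta^{\overline{\mathfrak{S}}}(\delta)$ are strictly covered by pure states by (\ref{thirdcoveringpropertySbarii})), so $\delta$ is not even an upper bound of $\alpha$. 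Second, the whole detour through a ``common witness'' and the structure Lemmas \ref{natureorthocompleteU}--\ref{natureorthocompleteUbis} is not what is needed, and it is exactly the part you leave unverified. The paper's argument is two lines and uses none of this: from $\alpha\underline{\perp}\epsilon$ and $\beta\underline{\perp}\epsilon$, the order-reversing property of $\star$ applied to the real witnesses below $\epsilon$ yields $\epsilon^\star\sqsubseteq_{{}_{{\mathfrak{S}}}}\alpha$ and $\epsilon^\star\sqsubseteq_{{}_{{\mathfrak{S}}}}\beta$ (for $\epsilon$ real one has $\omega^\star\sqsubseteq\epsilon\Rightarrow\epsilon^\star\sqsubseteq\omega\sqsubseteq\alpha$), hence $\epsilon^\star\sqsubseteq_{{}_{{\mathfrak{S}}}}(\alpha\sqcap_{{}_{\overline{\mathfrak{S}}}}\beta)\sqsubseteq_{{}_{{\mathfrak{S}}}}\delta$ since $(\alpha\sqcap_{{}_{\overline{\mathfrak{S}}}}\beta)\sqcoversubset_{{}_{{\mathfrak{S}}}}\delta$ by ${\mathfrak{r}}^U(\delta,\alpha,\beta)$, and this already witnesses $\epsilon\underline{\perp}\delta$; orthogonal completeness of $U$ plays no role. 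So your (\ref{propO3}) as written contains a step that fails and replaces a direct monotonicity argument with machinery that does not obviously close the common-witness gap you yourself flag.
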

\begin{proof}
The properties (\ref{propO1}) and (\ref{propO2}) are trivial.\\

Let us now check the property (\ref{propO3}).  As long as $\alpha \underline{\perp}\epsilon$ and $\beta \underline{\perp}\epsilon$, we have $\epsilon^\star \sqsubseteq_{{}_{{ \mathfrak{S}}}} (\alpha\sqcap_{{}_{\overline{\mathfrak{S}}}}\beta)$, but we have also $ (\alpha\sqcap_{{}_{\overline{\mathfrak{S}}}}\beta)\sqcoversubset_{{}_{{ \mathfrak{S}}}}\delta$ because ${\mathfrak{r}}^U(\delta,\alpha,\beta)$. We then have $\epsilon^\star\sqsubseteq_{{}_{{ \mathfrak{S}}}} \delta$, which means $\epsilon \underline{\perp}\delta$.\\

Let us now check the property (\ref{propO4}). We have to distinguish the different cases. \\
Let us begin with $\alpha,\beta\in \overline{\mathfrak{S}}{}^{{}^{pure}}$ with $\alpha\not=\beta$ and $\alpha\asymp\beta$. From Lemma \ref{lambdawr} and $\alpha\asymp\beta$, we know that $\epsilon:=\alpha^\star \sqcup_{{}_{{\mathfrak{S}}}}(\alpha\sqcap_{{}_{\overline{\mathfrak{S}}}}\beta)$ exists in ${\mathfrak{G}}^{\widecheck{\mathfrak{S}}}$.  Moreover, using (\ref{thirdcoveringpropertySbarii}), we have $(\alpha\sqcap_{{}_{\overline{\mathfrak{S}}}}\beta)\sqcoversubset_{{}_{{ \mathfrak{S}}}}\epsilon$, i.e. ${\mathfrak{r}}^U(\epsilon,\alpha,\beta)$. Endly, we have $\alpha^\star\sqsubseteq_{{}_{\overline{ \mathfrak{S}}}} \epsilon$,  i.e. $\epsilon \underline{\perp} \alpha$.\\
Secondly, let us consider the case where $\alpha\in \overline{\mathfrak{S}}{}^{{}^{pure}}$ and $\beta\in {\mathfrak{G}}^{\widecheck{\mathfrak{S}}}\smallsetminus \overline{\mathfrak{S}}{}^{{}^{pure}}$. The same analysis with the same expression for $\epsilon$ gives the same result.\\
Thirdly,  let us consider the case where $\beta\in \overline{\mathfrak{S}}{}^{{}^{pure}}$ and $\alpha\in {\mathfrak{G}}^{\widecheck{\mathfrak{S}}}\smallsetminus \overline{\mathfrak{S}}{}^{{}^{pure}}$.  By definition, there exists $\gamma\in \Theta^{\overline{\mathfrak{S}}}(\alpha)$ and $\varphi_\gamma,\psi_\gamma\in \overline{\mathfrak{S}}{}^{{}^{pure}}$ such that $\gamma=(\varphi_\gamma\sqcap_{{}_{\overline{\mathfrak{S}}}}\psi_\gamma)\sqcoversubset_{{}_{{ \mathfrak{S}}}}\varphi_\gamma,\psi_\gamma$ and $\alpha=\varphi_\gamma^\star \sqcup_{{}_{{\mathfrak{S}}}}\gamma$. Let us denote by $\delta:=\beta\sqcap_{{}_{\overline{\mathfrak{S}}}}\alpha \;\in \Theta^{\overline{\mathfrak{S}}}(\alpha)$. We have to distinguish between two sub-cases :\\
(i) $\delta=\gamma$. In this case, it suffice to choose $\epsilon:=\varphi_\gamma$ and we check immediately $(\alpha\sqcap_{{}_{\overline{\mathfrak{S}}}}\beta)\sqcoversubset_{{}_{{ \mathfrak{S}}}}\epsilon$ and $\epsilon^\star \sqsubseteq_{{}_{{ \mathfrak{S}}}} \alpha$.\\
(ii) $\delta\not=\gamma$. In this case, it suffice to choose $\epsilon:= \delta\sqcup_{{}_{{\mathfrak{S}}}}\gamma^\star $ which is an element of ${\mathfrak{G}}^{\widecheck{\mathfrak{S}}}$ because of property (\ref{thirdcoveringpropertySbarvi}).  We check immediately $(\alpha\sqcap_{{}_{\overline{\mathfrak{S}}}}\beta)\sqcoversubset_{{}_{{ \mathfrak{S}}}}\epsilon$ using property (\ref{thirdcoveringpropertySbarii}). We check also trivially $\gamma^\star \sqsubseteq_{{}_{{ \mathfrak{S}}}} \epsilon$ and $\gamma \sqsubseteq_{{}_{{ \mathfrak{S}}}} \alpha$, which means $\epsilon \underline{\perp} \alpha$.\\
Fourthly, let us consider the case where $\beta,\alpha\in {\mathfrak{G}}^{\widecheck{\mathfrak{S}}}\smallsetminus \overline{\mathfrak{S}}{}^{{}^{pure}}$.  The same construction as above leads to the result.
\end{proof}

The last element to check is the fact that the tensor product respects the property of linearity defined in Definition \ref{linearindeterministicspace}.

\begin{theorem}{\bf [Irreducibility]}\label{theoremirredproj}\\
The space of states ${ \mathfrak{S}}$ equipped with its real structure $(\overline{ \mathfrak{S}},\star)$ satisfies the following irreducibility property
\begin{eqnarray}
\hspace{-1.3cm}\forall \sigma,\lambda\in { \mathfrak{G}}^{\widecheck{\mathfrak{S}}}, \sigma\not=\lambda, \sigma\asymp\lambda,\;\; \exists U\in  { \mathfrak{E}}_{\underline{\perp}}^{\check{\mathfrak{S}}},\sigma,\lambda\in U, \exists \kappa \in U\;\vert\; (\,\kappa\not=\sigma\;\textit{\rm and}\; \kappa\not=\lambda\;\textit{\rm and}\; {\mathfrak{r}}^U(\kappa,\lambda,\sigma)\,).\;\;\;\;\;\;\label{propirreducibility}
\end{eqnarray}
\end{theorem}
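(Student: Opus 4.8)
The statement is the ``irreducibility'' axiom for projective geometries in the restricted setting of ${\mathfrak{G}}^{\widecheck{\mathfrak{S}}}$ with the consistency cover ${ \mathfrak{E}}_{\underline{\perp}}^{\check{\mathfrak{S}}}$: given two distinct consistent points $\sigma\asymp\lambda$, one must exhibit a third point $\kappa$ on the line $\ell_{(\sigma\lambda)}$, living inside a common orthogonally complete consistent subset. The natural strategy is to produce $\kappa$ explicitly as a supremum of the form $\omega^\star\sqcup_{{}_{{ \mathfrak{S}}}}(\sigma\sqcap_{{}_{\overline{\mathfrak{S}}}}\lambda)$ for a cleverly chosen $\omega$, exactly in the way ${\widecheck{\mathfrak{S}}}$ is built in (\ref{defSwidecheck}), and then to check that $\kappa$ sits on the line, lies in ${\mathfrak{G}}^{\widecheck{\mathfrak{S}}}$, and can be accommodated in an orthogonally complete $U$. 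The key input tools are: the existence results (\ref{thirdcoveringpropertySbari}) and (\ref{thirdcoveringpropertySbarvi}) guaranteeing that such suprema exist in ${\mathfrak{S}}$ and stay in ${\mathfrak{G}}^{\widecheck{\mathfrak{S}}}$; property (\ref{thirdcoveringpropertySbarii}) ensuring $(\sigma\sqcap_{{}_{\overline{\mathfrak{S}}}}\lambda)\sqcoversubset_{{}_{{ \mathfrak{S}}}}\kappa$, i.e. ${\mathfrak{r}}^U(\kappa,\sigma,\lambda)$; and the covering result (\ref{coveringpropertySbar}) together with Theorem \ref{lambdawr} to analyze the case where $\sigma$ and $\lambda$ are pure.

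\textbf{Case analysis.} First I would split according to how many of $\sigma,\lambda$ lie in $\overline{\mathfrak{S}}{}^{{}^{pure}}$. If $\sigma,\lambda\in\overline{\mathfrak{S}}{}^{{}^{pure}}$ with $\sigma\not=\lambda$ and $\sigma\asymp\lambda$ (i.e. $\sigma\wr\lambda$), then $(\sigma\sqcap_{{}_{\overline{\mathfrak{S}}}}\lambda)\sqcoversubset_{{}_{\overline{\mathfrak{S}}}}\sigma,\lambda$ by (\ref{coveringpropertySbar}); since $\sigma\wr\lambda$ agree on at least $N-2$ partial traces, the ``linear'' and ``completely indeterministic'' structure of the quantum-bit tensor product (via Theorem \ref{theoremirreducibilitytensor} and the explicit computation leading to (\ref{thirdcoveringpropertySbari})) provides a $\mu\in\overline{\mathfrak{S}}{}^{{}^{pure}}$ with $\mu^\star\not\sqsubseteq_{{}_{\overline{\mathfrak{S}}}}(\sigma\sqcap_{{}_{\overline{\mathfrak{S}}}}\lambda)$, whence $\kappa:=\mu^\star\sqcup_{{}_{{ \mathfrak{S}}}}(\sigma\sqcap_{{}_{\overline{\mathfrak{S}}}}\lambda)$ exists and (\ref{thirdcoveringpropertySbarii}) gives $\kappa\sqcoversupset_{{}_{{ \mathfrak{S}}}}(\sigma\sqcap_{{}_{\overline{\mathfrak{S}}}}\lambda)$, so $\kappa\in\ell_{(\sigma\lambda)}$ and $\kappa\not=\sigma,\lambda$; moreover $\kappa\in{\mathfrak{G}}^{\widecheck{\mathfrak{S}}}$ by construction and Theorem \ref{lambdawr}. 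Then one needs a $U\in{ \mathfrak{E}}_{\underline{\perp}}^{\check{\mathfrak{S}}}$ containing $\sigma,\lambda,\kappa$: starting from any maximal consistent subset $V\in{ \mathfrak{E}}^{\check{\mathfrak{S}}}$ containing these three mutually consistent points, one extracts an orthogonally complete $U\subseteq V$ using Lemma \ref{natureorthocompleteUbis} (the structure of orthogonally complete consistent subsets), being careful to choose $\mu$ so that the resulting $U$ is orthogonally complete — this is where the freedom in choosing $\mu$ with the extra orthogonality conditions (e.g. $\mu^\star\sqsubseteq_{{}_{\overline{\mathfrak{S}}}}\sigma$) is used. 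For the mixed cases ($\sigma$ pure, $\lambda\in{\widecheck{\mathfrak{S}}}\smallsetminus\overline{\mathfrak{S}}$, or both in ${\widecheck{\mathfrak{S}}}\smallsetminus\overline{\mathfrak{S}}$), I would argue along the same lines, replacing the use of (\ref{thirdcoveringpropertySbari}) by (\ref{thirdcoveringpropertySbarvi}) when the ``base point'' $(\sigma\sqcap_{{}_{\overline{\mathfrak{S}}}}\lambda)$ already sits below a $\chi\in{\widecheck{\mathfrak{S}}}\smallsetminus\overline{\mathfrak{S}}$, and invoking the structural descriptions in Lemma \ref{natureorthocompleteU} and Lemma \ref{typoU} to organize $\Theta^{\overline{\mathfrak{S}}}$-sets.

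\textbf{The main obstacle.} The routine parts (existence of $\kappa$, membership in the line, distinctness from $\sigma,\lambda$) follow almost immediately from (\ref{thirdcoveringpropertySbari})--(\ref{thirdcoveringpropertySbarvi}). The genuinely delicate step is guaranteeing that $\kappa$ together with $\sigma,\lambda$ fits into an \emph{orthogonally complete} consistent subset $U$ — i.e. that membership in ${ \mathfrak{E}}_{\underline{\perp}}^{\check{\mathfrak{S}}}$ (not merely ${ \mathfrak{E}}^{\check{\mathfrak{S}}}$) is achievable. One must check that the orthogonality conditions (\ref{orthogcompletecondition1})--(\ref{orthogcompletecondition2}) of Definition \ref{completenessU} do not force $\kappa$ out of ${\mathfrak{G}}^{\widecheck{\mathfrak{S}}}$ (the ``annoying problem'' flagged in Remark \ref{remarkstarredplanes} and addressed in Lemma \ref{caseanalysisUSpure} via the exclusion of starred planes). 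Concretely, I would build $U$ by first taking the orthogonally complete subset generated by a suitable $\chi\in{\widecheck{\mathfrak{S}}}\smallsetminus\overline{\mathfrak{S}}$ whose $\Theta^{\overline{\mathfrak{S}}}(\chi)$ contains $(\sigma\sqcap_{{}_{\overline{\mathfrak{S}}}}\lambda)$, then verify using Lemma \ref{firstlemmawidecheck} that the colinear partner $\kappa$ can be realized either as a pure state or as an element of ${\mathfrak{G}}^{\widecheck{\mathfrak{S}}}\smallsetminus\overline{\mathfrak{S}}$, never landing in the forbidden ${\mathfrak{G}}^{\check{\mathfrak{S}}}\smallsetminus{\mathfrak{G}}^{\widecheck{\mathfrak{S}}}$; the starred-plane bookkeeping is what makes this work, and I expect the bulk of the proof effort to go into this verification. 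Once $U\in{ \mathfrak{E}}_{\underline{\perp}}^{\check{\mathfrak{S}}}$ with $\sigma,\lambda,\kappa\in U$ is secured, (\ref{propirreducibility}) is immediate.
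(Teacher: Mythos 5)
There is a genuine gap, and it sits exactly where your single recipe $\kappa:=\mu^\star\sqcup_{{}_{{ \mathfrak{S}}}}(\sigma\sqcap_{{}_{\overline{\mathfrak{S}}}}\lambda)$ is supposed to do all the work. The paper's proof splits first on orthogonality, not on purity: when $\sigma\not\!\!\!\underline{\perp}\lambda$ the statement is an immediate corollary of property (\ref{propO4}) of Theorem \ref{theoremorthoproj} (whose proof is precisely your construction with the specific choice $\mu=\sigma$, and where $\kappa\not=\lambda$ follows from $\sigma\not\!\!\!\underline{\perp}\lambda$). When $\sigma\underline{\perp}\lambda$, however, your recipe breaks down. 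Take $\sigma,\lambda$ pure with $\Upsilon^{(j,k)}_{\sigma}=\alpha\widetilde{\otimes}\beta$, $\Upsilon^{(j,k)}_{\lambda}=\alpha^\star\widetilde{\otimes}\beta'$ and $\beta\not=\beta'$: the expansion (\ref{developmentetildeordersimplify}) shows that $\sigma$ and $\lambda$ are the \emph{only} pure states above $\sigma\sqcap_{{}_{\overline{\mathfrak{S}}}}\lambda$, and the choices $\mu=\sigma$ or $\mu=\lambda$ degenerate (e.g. $\sigma^\star\sqcup_{{}_{{ \mathfrak{S}}}}(\sigma\sqcap_{{}_{\overline{\mathfrak{S}}}}\lambda)=\lambda$ because $\sigma^\star\sqsubseteq_{{}_{{ \mathfrak{S}}}}\lambda$ and $(\sigma\sqcap_{{}_{\overline{\mathfrak{S}}}}\lambda)\sqcoversubset_{{}_{\overline{\mathfrak{S}}}}\lambda$), so no admissible $\mu$ on the line remains; and for $\mu$ off the line the element $\mu^\star\sqcup_{{}_{{ \mathfrak{S}}}}(\sigma\sqcap_{{}_{\overline{\mathfrak{S}}}}\lambda)$ lies in ${\check{\mathfrak{S}}}$ by (\ref{defSbarhat}) but in general \emph{not} in ${\widecheck{\mathfrak{S}}}$ — membership in ${\widecheck{\mathfrak{S}}}$ requires the anchor $\mu$ to sit above the base point, cf. (\ref{defSwidecheck}) and Remark \ref{remarkstarredplanes} — and then Lemma \ref{natureorthocompleteU} forbids any orthogonally complete $U$ from containing it. So "$\kappa\in{\mathfrak{G}}^{\widecheck{\mathfrak{S}}}$ by construction" is unjustified, and your purity-based case split never isolates the orthogonal configuration where the real difficulty lives.

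The paper resolves that configuration with constructions of a different shape: when the free factors match ($\beta=\beta'$) it simply takes the third pure state $\kappa:=\Delta^{(j,k);\sigma}_{\alpha'\widetilde{\otimes}\beta}$; when they differ it takes $\kappa:=(\Delta^{(j,k);\sigma}_{\alpha\widetilde{\otimes}\beta^\star})^\star \sqcup_{{}_{{ \mathfrak{S}}}}(\Delta^{(j,k);\sigma}_{\alpha\widetilde{\otimes}\beta^\star}\sqcap_{{}_{\overline{ \mathfrak{S}}}}\Delta^{(j,k);\sigma}_{\alpha'\widetilde{\otimes}\beta'})$, an element of ${\widecheck{\mathfrak{S}}}$ whose orthocomplement anchor is a \emph{different} point of $\Theta^{\overline{\mathfrak{S}}}(\kappa)$ than $\sigma\sqcap_{{}_{\overline{\mathfrak{S}}}}\lambda$ (colinearity is read off from Theorem \ref{lambdawr}); in the mixed case it takes $\kappa:=\psi_\gamma$, and in the hidden–hidden case $\kappa:=\psi_\alpha$ for $\{\alpha\}=\Theta^{\overline{\mathfrak{S}}}(\sigma)\cap\Theta^{\overline{\mathfrak{S}}}(\lambda)$. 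None of these are of your form $\mu^\star\sqcup_{{}_{{ \mathfrak{S}}}}(\sigma\sqcap_{{}_{\overline{\mathfrak{S}}}}\lambda)$, and without such an alternative your plan cannot close the orthogonal case. Conversely, the orthogonal-completeness bookkeeping you single out as the main obstacle is comparatively light in the paper once the right $\kappa$ is chosen, since the needed orthogonality relations among $\sigma,\lambda,\kappa$ hold by construction.
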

\begin{proof}
First of all, we note that, if $\sigma\not\!\!\!\underline{\perp}\lambda$ and $\lambda\not=\sigma$ and $\sigma\asymp\lambda$,  using property (\ref{propO4}), we deduce the existence of $U\in  { \mathfrak{E}}^{\widecheck{\mathfrak{S}}}$ such that $\sigma,\lambda\in U$ and the existence of $\kappa \in U$ such that ${\mathfrak{r}}^U(\kappa,\lambda,\sigma)$ and $\sigma\underline{\perp}\kappa$. As long as $\sigma\underline{\perp}\kappa$ we deduce $\sigma\not=\kappa$, and as long as $\sigma\not\!\!\!\underline{\perp}\lambda$ we deduce $\kappa\not=\lambda$. This proves the property (\ref{propirreducibility}) when $\sigma\not\!\!\!\underline{\perp}\lambda$.\\

It remains to study the case $\sigma\underline{\perp}\lambda$, $\sigma\asymp\lambda$. Let us distinguish the different cases.\\

\noindent (1) Let us first study the case $\sigma,\lambda\in {\overline{\mathfrak{S}}}{}^{{}^{pure}}$ with $\sigma\asymp\lambda$ and $\sigma\underline{\perp}\lambda$.  We will introduce $I\subseteq\{1,\cdots,N\}$ such that $Card(I)=N-2$ and $\forall i\in I,\zeta^{{}^{{\mathfrak{S}}_{A_1}\cdots{\mathfrak{S}}_{A_N}}}_{(i)}(\sigma) = \zeta^{{}^{{\mathfrak{S}}_{A_1}\cdots{\mathfrak{S}}_{A_N}}}_{(i)}(\lambda)$. We will also denote $j$ and $k$ such that $\{j,k\}=\{1,\cdots,N\}\smallsetminus I$ and $j<k$.\\
Then, we are in one of the two following sub-cases :\\
(1A) there exists $\alpha\in \overline{ \mathfrak{S}}{}_{A_j}^{{}^{pure}}$ and $\beta,\beta'\in \overline{ \mathfrak{S}}{}_{A_k}^{{}^{pure}}$ such that $\Upsilon^{(j,k)}_{\sigma}=\alpha\widetilde{\otimes}\beta$ and $\Upsilon^{(j,k)}_{\lambda}=\alpha^\star\widetilde{\otimes}\beta'$.  If $\beta=\beta'$ then for any $\alpha'\in \overline{ \mathfrak{S}}{}_{A_j}^{{}^{pure}}$ such that $\alpha'\not=\alpha,\alpha^\star$, we can define $\kappa:=\alpha'\widetilde{\otimes}\beta$ which satisfies the announced constraints.  If $\beta\not=\beta'$, then, for any $\alpha'\in \overline{ \mathfrak{S}}{}_{A_j}^{{}^{pure}}$ such that $\alpha'\not=\alpha,\alpha^\star$ we can define ${\kappa}:=(\Delta^{(j,k);\sigma}_{\alpha\widetilde{\otimes}\beta^\star})^\star \sqcup_{{}_{{ \mathfrak{S}}}}(\Delta^{(j,k);\sigma}_{\alpha\widetilde{\otimes}\beta^\star}\sqcap_{{}_{\overline{ \mathfrak{S}}}}\Delta^{(j,k);\sigma}_{\alpha'\widetilde{\otimes}\beta'})$ which satisfies indeed $\kappa\not=\sigma$,  $\kappa\not=\lambda$ and $(\sigma\sqcap_{{}_{\overline{ \mathfrak{S}}}}\lambda)\in \Theta^{{\overline{\mathfrak{S}}}}(\kappa)$.\\
(1B) there exists $\alpha,\alpha'\in \overline{ \mathfrak{S}}{}_{A_j}^{{}^{pure}}$ and $\beta\in \overline{ \mathfrak{S}}{}_{A_k}^{{}^{pure}}$ such that $\Upsilon^{(j,k)}_{\sigma}=\alpha\widetilde{\otimes}\beta$ and $\Upsilon^{(j,k)}_{\lambda}=\alpha'\widetilde{\otimes}\beta^\star$.  Same treatment.\\

\noindent (2) Let us now study the case $\sigma\in {\overline{\mathfrak{S}}}{}^{{}^{pure}}$, $\lambda\in { \mathfrak{G}}^{\widecheck{\mathfrak{S}}}\smallsetminus {\overline{\mathfrak{S}}}{}^{{}^{pure}}$ with $\sigma\asymp\lambda$ and $\sigma\underline{\perp}\lambda$. We deduce that $\lambda=\varphi_\gamma^\star\sqcup_{{}_{{ \mathfrak{S}}}}(\varphi_\gamma\sqcap_{{}_{\overline{ \mathfrak{S}}}}\psi_\gamma)$ and $\sigma=\varphi_\gamma$ (we have adopted the same notation as in (\ref{thirdcoveringpropertySbar0}) (\ref{thirdcoveringpropertySbar0bis}) (\ref{thirdcoveringpropertySbarviic}) (\ref{thirdcoveringpropertySbarviib})). It then suffices to choose $\kappa:=\psi_\gamma$ to obtain $\kappa\not=\sigma$,  $\kappa\not=\lambda$ and $(\sigma\sqcap_{{}_{\overline{ \mathfrak{S}}}}\lambda)\in \Theta^{{\overline{\mathfrak{S}}}}(\kappa)$.\\

\noindent (3) Let us now study the case $\sigma,\lambda\in { \mathfrak{G}}^{\widecheck{\mathfrak{S}}}\smallsetminus {\overline{\mathfrak{S}}}{}^{{}^{pure}}$ with $\sigma\asymp\lambda$ and $\sigma\underline{\perp}\lambda$.  In this case, it suffices to choose $\kappa:=\psi_\alpha$ for $\{\alpha\}:=\Theta^{{\overline{\mathfrak{S}}}}(\sigma)\cap\Theta^{{\overline{\mathfrak{S}}}}(\lambda)$.\\

This concludes the proof.
\end{proof}

The result of this section appears to be outstanding in light of the traditional approach to quantum logic. We have reconstructed a structure on the tensor product that is as close as possible to irreducible Hilbert geometry, even though we did not impose anything of the sort {\em ab initio}. This is an absolutely non-trivial consequence of our construction of the tensor product and of the ontic completions.

\section{Fundamental properties of the bipartite indeterministic experiments}\label{sectionfundamental}

This last section is devoted to the study of typical quantum-like properties: the absence of broadcasting and the existence of Bell non-local states. Recall that the core criticism of the quantum logic program \cite{Pullmanova}\cite{HudsonPullmanova}\cite{Pykacz} is the impossibility of constructing entangled states exhibiting Bell non-locality within the framework of a logical approach to quantum mechanics. The outcome of this section is then one of the main achievements of our approach.

\subsection{No-broadcasting theorem}

\begin{definition}{\bf [Broadcasting]}\\
We will say that a given space of states ${ \mathfrak{S}}$ allows the broadcasting iff 
\begin{eqnarray}
\exists \Psi \in \overline{\mathfrak{C}}({ \mathfrak{S}},{ \mathfrak{S}}\widetilde{\otimes}{ \mathfrak{S}})&\vert &\forall \sigma\in \overline{\mathfrak{S}},\;(\zeta^{{{ \mathfrak{S}}}{{ \mathfrak{S}}}}_{(1)}\circ \Psi)(\sigma) = (\zeta^{{{ \mathfrak{S}}}{{ \mathfrak{S}}}}_{(2)}\circ \Psi)(\sigma) = \sigma.\;\;\;\;\;\;\;\;\;\;\;
\end{eqnarray}
\end{definition}

\begin{lemma}\label{lemmabroadcasting}
Let ${ \mathfrak{m}}_1,{ \mathfrak{m}}_2\in  \overline{\mathfrak{M}}_{{}_{{ \mathfrak{S}}}}$ be two real measurements on the space of states ${ \mathfrak{S}}$ (they are real morphisms from ${ \mathfrak{S}}$ to ${ \mathfrak{B}}$).\\
If there exists a real morphism $\Psi \in \overline{\mathfrak{C}}({ \mathfrak{S}},{ \mathfrak{S}}\widetilde{\otimes}{ \mathfrak{S}})$ such that $\forall \sigma\in \overline{\mathfrak{S}},\;(\zeta^{{{ \mathfrak{S}}}{{ \mathfrak{S}}}}_{(1)}\circ \Psi)(\sigma) = (\zeta^{{{ \mathfrak{S}}}{{ \mathfrak{S}}}}_{(2)}\circ \Psi)(\sigma) = \sigma$, then there exists a real morphism $\Psi_{({ \mathfrak{m}}_1,{ \mathfrak{m}}_2)} \in \overline{\mathfrak{C}}({ \mathfrak{S}},{ \mathfrak{B}}\widetilde{\otimes}{ \mathfrak{B}})$ such that $\forall \sigma\in \overline{\mathfrak{S}}, (\zeta^{{{ \mathfrak{B}}}{{ \mathfrak{B}}}}_{(1)}\circ \Psi_{({ \mathfrak{m}}_1,{ \mathfrak{m}}_2)})(\sigma) = { \mathfrak{m}}_1(\sigma)$ and $(\zeta^{{{ \mathfrak{B}}}{{ \mathfrak{B}}}}_{(2)}\circ \Psi_{({ \mathfrak{m}}_1,{ \mathfrak{m}}_2)})(\sigma) = { \mathfrak{m}}_2(\sigma)$.
\end{lemma}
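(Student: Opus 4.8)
The plan is to set $\Psi_{(\mathfrak{m}_1,\mathfrak{m}_2)} := (\mathfrak{m}_1 \widetilde{\otimes} \mathfrak{m}_2) \circ \Psi$, where $\mathfrak{m}_1 \widetilde{\otimes} \mathfrak{m}_2$ denotes the tensor product of the two measurement morphisms $\mathfrak{m}_1,\mathfrak{m}_2 : \mathfrak{S} \to \mathfrak{B}$, viewed as a morphism from $\mathfrak{S}\widetilde{\otimes}\mathfrak{S}$ to $\mathfrak{B}\widetilde{\otimes}\mathfrak{B}$ built exactly as the tensor product of morphisms in subsection \ref{subsectionsymmetriesbipartiteindeterministic}. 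Concretely, on generators it acts by $(\mathfrak{m}_1\widetilde{\otimes}\mathfrak{m}_2)(\bigsqcap_{i}^{\mathfrak{S}\widetilde{\otimes}\mathfrak{S}} \sigma_{i,1}\widetilde{\otimes}\sigma_{i,2}) := \bigsqcap_{i}^{\mathfrak{B}\widetilde{\otimes}\mathfrak{B}} \mathfrak{m}_1(\sigma_{i,1})\widetilde{\otimes}\mathfrak{m}_2(\sigma_{i,2})$, and $\Psi_{(\mathfrak{m}_1,\mathfrak{m}_2)}$ is obtained by postcomposing the broadcasting morphism $\Psi$ with it.

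First I would check that $\mathfrak{m}_1\widetilde{\otimes}\mathfrak{m}_2$ is a well-defined real morphism. Since $\mathfrak{m}_1,\mathfrak{m}_2$ preserve all infima (property (\ref{f12cap})) and $\epsilon^{\mathfrak{B}\widetilde{\otimes}\mathfrak{B}}$ separates the elements of $\mathfrak{B}\widetilde{\otimes}\mathfrak{B}$ in the sense of Lemma \ref{lemmaseparatedminimal}, the prescription above is independent of the representative chosen for an element of $\mathfrak{S}\widetilde{\otimes}\mathfrak{S}$ and yields a map preserving infima; by the theorem following Lemma \ref{aepsilonsigma} it therefore underlies a unique Chu morphism. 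Realness, i.e. $(\mathfrak{m}_1\widetilde{\otimes}\mathfrak{m}_2)^\ast(\overline{ \mathfrak{E}}_{\mathfrak{B}\widetilde{\otimes}\mathfrak{B}})\subseteq\overline{ \mathfrak{E}}_{\mathfrak{S}\widetilde{\otimes}\mathfrak{S}}$, then follows because the real effects of the target are generated from pure tensors of real effects of $\mathfrak{B}$ by the operations $\sqcap$, $\overline{(\cdot)}$, $\mathfrak{Y}$, and the right components $\mathfrak{m}_i^\ast$ send real effects of $\mathfrak{B}$ to real effects of $\mathfrak{S}$ by hypothesis that $\mathfrak{m}_i$ are real measurements.

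Next I would record that real morphisms compose: the composite of two Chu morphisms is a Chu morphism (composition theorem for morphisms), $(g\circ f)^\ast = f^\ast\circ g^\ast$, and $f^\ast(g^\ast(\overline{ \mathfrak{E}}))\subseteq f^\ast(\overline{ \mathfrak{E}})\subseteq\overline{ \mathfrak{E}}$; hence $\Psi_{(\mathfrak{m}_1,\mathfrak{m}_2)} = (\mathfrak{m}_1\widetilde{\otimes}\mathfrak{m}_2)\circ\Psi$ belongs to $\overline{\mathfrak{C}}(\mathfrak{S},\mathfrak{B}\widetilde{\otimes}\mathfrak{B})$. It then remains to verify the two partial-trace equations, which reduce to the naturality of the partial traces with respect to the tensor product of morphisms, $\zeta^{\mathfrak{B}\mathfrak{B}}_{(i)}\circ(\mathfrak{m}_1\widetilde{\otimes}\mathfrak{m}_2) = \mathfrak{m}_i\circ\zeta^{\mathfrak{S}\mathfrak{S}}_{(i)}$, which I would check on pure tensors — where both sides send $\sigma_1\widetilde{\otimes}\sigma_2$ to $\mathfrak{m}_i(\sigma_i)$ — and extend to all elements by the homomorphic character of $\zeta_{(i)}$ (Theorem \ref{theorempartialtracesminimal}), of $\mathfrak{m}_i$, and of $\mathfrak{m}_1\widetilde{\otimes}\mathfrak{m}_2$. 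Combining this with the broadcasting hypothesis $\zeta^{\mathfrak{S}\mathfrak{S}}_{(i)}\circ\Psi = id$ on $\overline{\mathfrak{S}}$ gives, for every $\sigma\in\overline{\mathfrak{S}}$, $(\zeta^{\mathfrak{B}\mathfrak{B}}_{(i)}\circ\Psi_{(\mathfrak{m}_1,\mathfrak{m}_2)})(\sigma) = \mathfrak{m}_i(\zeta^{\mathfrak{S}\mathfrak{S}}_{(i)}(\Psi(\sigma))) = \mathfrak{m}_i(\sigma)$, which is exactly the claim.

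The main obstacle I expect is the justification of the second step in the generality needed here: the paper introduces the tensor product of morphisms explicitly only for simplex spaces of states (subsection \ref{subsectionsymmetriesbipartitedeterministic}) and, for ontic completions, through formula (\ref{morphismtensorindeterminism}); so one must either fit the present situation — a morphism into the simplex $\mathfrak{B}\widetilde{\otimes}\mathfrak{B}$, which is indeed a simplex by Theorem \ref{SASBditribandcup} — into one of those frameworks, or redo the well-definedness and, above all, the realness check by hand as sketched above. Once $\mathfrak{m}_1\widetilde{\otimes}\mathfrak{m}_2$ is available as a real morphism commuting with the partial traces, everything else is a routine diagram chase using the homomorphic and separation properties already established.
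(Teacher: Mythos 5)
Your construction $\Psi_{(\mathfrak{m}_1,\mathfrak{m}_2)}:=(\mathfrak{m}_1\widetilde{\otimes}\mathfrak{m}_2)\circ\Psi$ is exactly the one the paper uses (its proof consists of this single line), and your additional verifications of well-definedness, realness, and naturality of the partial traces are correct elaborations of what the paper leaves implicit. The proposal is correct and follows essentially the same route as the paper.
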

\begin{proof}
It suffices to choose $\Psi_{({ \mathfrak{m}}_1,{ \mathfrak{m}}_2)}:= ({ \mathfrak{m}}_1\widetilde{\otimes}{ \mathfrak{m}}_2)\circ \Psi$.
\end{proof}

\begin{theorem}{\bf [No-broadcasting theorem]}\\
If the space of states ${ \mathfrak{S}}$ (equipped with its real structure $(\overline{ \mathfrak{S}},\star)$) is not deterministic, it does not allow the broadcasting.
\end{theorem}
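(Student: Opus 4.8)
The plan is to argue by contraposition: assuming $\mathfrak{S}$ admits broadcasting, I will show that $(\overline{\mathfrak{S}},\star)$ is deterministic, i.e. a simplex space of states. The first step is purely formal and uses Lemma \ref{lemmabroadcasting}: broadcasting supplies, for \emph{every} pair of real measurements $\mathfrak{m}_1,\mathfrak{m}_2\in\overline{\mathfrak{M}}_{\mathfrak{S}}$, a real morphism $\Psi_{(\mathfrak{m}_1,\mathfrak{m}_2)}\in\overline{\mathfrak{C}}(\mathfrak{S},\mathfrak{B}\widetilde{\otimes}\mathfrak{B})$ with $\zeta^{\mathfrak{B}\mathfrak{B}}_{(i)}\circ\Psi_{(\mathfrak{m}_1,\mathfrak{m}_2)}=\mathfrak{m}_i$ for $i=1,2$; by Definition \ref{definitionjointmorphism} this says precisely that $\mathfrak{m}_1$ and $\mathfrak{m}_2$ are jointly compatible. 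Hence broadcasting forces every pair of real measurements to be jointly compatible, and the whole weight of the proof falls on the implication: every pair of real measurements jointly compatible $\Rightarrow$ $(\overline{\mathfrak{S}},\star)$ deterministic.

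For this implication I would argue by contradiction, invoking the characterisation of non-deterministic real structures (equation (\ref{threestatesnotdeterministic})): if $(\overline{\mathfrak{S}},\star)$ is not deterministic, there exist $\sigma_1,\sigma_2\in\overline{\mathfrak{S}}\smallsetminus\{\bot_{\mathfrak{S}}\}$ and $\sigma_3\in\overline{\mathfrak{S}}^{pure}$ with $\sigma_3\sqsupseteq_{\mathfrak{S}}\sigma_1\sqcap_{\overline{\mathfrak{S}}}\sigma_2$, $\sigma_3\not\sqsupseteq_{\mathfrak{S}}\sigma_1$, $\sigma_3\not\sqsupseteq_{\mathfrak{S}}\sigma_2$, $\sigma_1\parallel_{\overline{\mathfrak{S}}}\sigma_2$. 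From this witness I would extract two real states $a,b\in\overline{\mathfrak{S}}\smallsetminus\{\bot_{\mathfrak{S}}\}$ together with the two real effects $\mathfrak{l}_{(a,a^\star)}$ and $\mathfrak{l}_{(b,b^\star)}$ (which lie in $\overline{\mathfrak{E}}_{\mathfrak{S}}$ by (\ref{reducedspaceofeffects}) and (\ref{starcomplement})) arranged so that $a\sqcap_{\overline{\mathfrak{S}}}b\sqsubseteq_{\overline{\mathfrak{S}}}a^\star$, while $a$ and $a^\star$ are $\sqsubseteq$-incomparable to $b$ and $b^\star$ — these incomparabilities being exactly what $\sigma_3\not\sqsupseteq\sigma_1,\sigma_2$ and $\sigma_1\parallel\sigma_2$ guarantee, after possibly first normalising the witness via Definition \ref{deterministicspace} (so that $\sigma_3$ records the non-determinism of $\mathfrak{m}_{\mathfrak{l}_{(\sigma_3,\sigma_3^\star)}}$ on a pure state) and applying the order-reversing property (\ref{orderreversing}) of $\star$.

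Now consider the real measurements $\mathfrak{m}_{\mathfrak{l}_{(a,a^\star)}}$ and $\mathfrak{m}_{\mathfrak{l}_{(b,b^\star)}}$ and a joint morphism $\Psi$ for them, which maps into $\mathfrak{B}\widetilde{\otimes}\mathfrak{B}$ — a simplex by Theorem \ref{SASBditribandcup}, with exactly the four pure tensors $\textit{\bf Y}\widetilde{\otimes}\textit{\bf Y},\,\textit{\bf Y}\widetilde{\otimes}\textit{\bf N},\,\textit{\bf N}\widetilde{\otimes}\textit{\bf Y},\,\textit{\bf N}\widetilde{\otimes}\textit{\bf N}$ as pure states (Theorem \ref{theorempuretilde}), the partial traces being computed componentwise on pure tensors (Theorem \ref{theorempartialtracesminimal}). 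Using $\zeta_{(i)}\circ\Psi=\mathfrak{m}_i$ together with the incomparabilities, the values $\Psi(a)=\textit{\bf Y}\widetilde{\otimes}\textit{\bf Y}\sqcap\textit{\bf Y}\widetilde{\otimes}\textit{\bf N}$, $\Psi(a^\star)=\textit{\bf N}\widetilde{\otimes}\textit{\bf Y}\sqcap\textit{\bf N}\widetilde{\otimes}\textit{\bf N}$ and $\Psi(b)=\textit{\bf Y}\widetilde{\otimes}\textit{\bf Y}\sqcap\textit{\bf N}\widetilde{\otimes}\textit{\bf Y}$ are forced, since in the simplex each is the unique element with the prescribed pair of partial traces. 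Then $a\sqcap b\sqsubseteq a\sqcap a^\star$ and the homomorphic property (\ref{f12cap}) give $\Psi(a\sqcap b)\sqsubseteq\Psi(a)\sqcap\Psi(a^\star)=\bot_{\mathfrak{B}\widetilde{\otimes}\mathfrak{B}}$, so $\Psi(a\sqcap b)=\bot_{\mathfrak{B}\widetilde{\otimes}\mathfrak{B}}$; but (\ref{f12cap}) again gives $\Psi(a\sqcap b)=\Psi(a)\sqcap\Psi(b)=\textit{\bf Y}\widetilde{\otimes}\textit{\bf Y}\sqcap\textit{\bf Y}\widetilde{\otimes}\textit{\bf N}\sqcap\textit{\bf N}\widetilde{\otimes}\textit{\bf Y}$, the infimum of only three of the four pure states, which differs from $\bot_{\mathfrak{B}\widetilde{\otimes}\mathfrak{B}}$ by the uniqueness of pure-state decompositions in a simplex (Lemma \ref{lemmaUsimplex}). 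This contradiction closes the argument.

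The step I expect to be the main obstacle is exactly the extraction of the pair $(a,b)$ and, with it, the verification that $\Psi$ is forced to take the claimed values uniformly over an arbitrary non-deterministic witness: one must turn the order-theoretic data $\sigma_3\sqsupseteq\sigma_1\sqcap\sigma_2$, $\sigma_3\not\sqsupseteq\sigma_1,\sigma_2$, $\sigma_1\parallel\sigma_2$ into the single relation $a\sqcap b\sqsubseteq a^\star$ together with the required non-comparabilities, and one may need a short case analysis (or a preliminary reduction of the witness) to do so. Everything downstream — the computation of $\Psi$ on $a$, $a^\star$, $b$ and the final comparison of the two infima — is a routine check against the simplex structure of $\mathfrak{B}\widetilde{\otimes}\mathfrak{B}$; if the homomorphic route turns out to be awkward, the same incompatibility can alternatively be exhibited through the real-morphism condition $\Psi^{\ast}(\mathfrak{E}_{\mathfrak{B}\widetilde{\otimes}\mathfrak{B}})\subseteq\overline{\mathfrak{E}}_{\mathfrak{S}}$, by showing that $\Psi^{\ast}$ of a suitable effect would equal $\mathfrak{l}_{(\tau,\centerdot)}$ for a hidden state $\tau$.
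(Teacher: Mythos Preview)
Your overall architecture is exactly the paper's: use Lemma~\ref{lemmabroadcasting} to turn broadcasting into universal pairwise compatibility of real measurements, then exhibit a specific pair of real measurements whose joint morphism into the simplex $\mathfrak{B}\widetilde{\otimes}\mathfrak{B}$ would have to take two incompatible values, the contradiction being that a three-pure-state infimum differs from the four-pure-state bottom.

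Where you diverge from the paper is precisely at the point you flag as the obstacle, and the divergence is unnecessary. You reach for the characterisation~(\ref{threestatesnotdeterministic}) and then try to manufacture $a,b$ with $a\sqcap_{\overline{\mathfrak{S}}}b\sqsubseteq_{\overline{\mathfrak{S}}}a^\star$ together with a battery of incomparabilities; that relation is not obviously extractable from the witness in~(\ref{threestatesnotdeterministic}), and in fact if you take $a,b$ to be the two pure states coming straight from the negation of Definition~\ref{deterministicspace} there is no reason whatsoever to expect $\sigma_1\sqcap\sigma_2\sqsubseteq\sigma_1^\star$. The paper sidesteps this entirely. It takes $\sigma_1,\sigma_2\in\overline{\mathfrak{S}}{}^{pure}$ directly from the negation of Definition~\ref{deterministicspace} (so $\sigma_2\neq\sigma_1$ and $\sigma_2\not\sqsupseteq\sigma_1^\star$, whence by order-reversal of $\star$ all the needed non-comparabilities follow), and then evaluates the would-be joint morphism $\Psi$ not at $\sigma_1,\sigma_1^\star,\sigma_2$ but at $\sigma_1,\sigma_1^\star,\sigma_2^\star$. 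The point of using the \emph{atoms} $\sigma_1^\star,\sigma_2^\star$ is that their pairwise infima with anything are trivial: $\sigma_1\sqcap\sigma_1^\star=\bot_{\mathfrak{S}}=\sigma_1^\star\sqcap\sigma_2^\star$. One then computes $\Psi(\bot_{\mathfrak{S}})$ both ways, obtaining $\textit{\bf Y}\widetilde\otimes\bot\;\sqcap\;\textit{\bf N}\widetilde\otimes\bot=\bot\widetilde\otimes\bot$ versus $\textit{\bf N}\widetilde\otimes\bot\;\sqcap\;\bot\widetilde\otimes\textit{\bf N}$, and the latter is the infimum of only three of the four pure tensors --- the same ``three vs.\ four'' contradiction you reach, but without any extraction step. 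Your ``normalising the witness via Definition~\ref{deterministicspace}'' aside is in fact the whole story; drop the detour through~(\ref{threestatesnotdeterministic}) and the relation $a\sqcap b\sqsubseteq a^\star$, and evaluate at $\sigma_2^\star$ rather than $\sigma_2$.
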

\begin{proof}
Let us consider the space of states ${ \mathfrak{S}}$, and let us assume that it is NOT deterministic. We know, from the definition of indeterminism, that there exists $\sigma_1,\sigma_2\in  \overline{ \mathfrak{S}}{}^{{}^{pure}}$ such that $\sigma_2\not=\sigma_1$ and $\sigma_2\not\sqsupseteq_{{}_{\overline{ \mathfrak{S}}}} \sigma_1^\star$. Let us then define the following real measurements : ${ \mathfrak{m}}_1:={ \mathfrak{m}}_{{ \mathfrak{l}}_{(\sigma_1,\sigma_1^\star)}}$ and ${ \mathfrak{m}}_2:={ \mathfrak{m}}_{{ \mathfrak{l}}_{(\sigma_2,\sigma_2^\star)}}$. Let us now suppose that there exists a real morphism $\Psi_{({ \mathfrak{m}}_1,{ \mathfrak{m}}_2)} \in \overline{\mathfrak{C}}({ \mathfrak{S}},{ \mathfrak{B}}\widetilde{\otimes}{ \mathfrak{B}})$ such that $\zeta^{{{ \mathfrak{B}}}{{ \mathfrak{B}}}}_{(1)}\circ \Psi_{({ \mathfrak{m}}_1,{ \mathfrak{m}}_2)} = { \mathfrak{m}}_1$ and $\zeta^{{{ \mathfrak{B}}}{{ \mathfrak{B}}}}_{(2)}\circ \Psi_{({ \mathfrak{m}}_1,{ \mathfrak{m}}_2)} = { \mathfrak{m}}_2$, and let us exhibit a contradiction.\\
From ${ \mathfrak{m}}_1(\sigma_1)=\textit{\bf Y}$ and ${ \mathfrak{m}}_2(\sigma_1)=\bot$, we deduce that $\Psi_{({ \mathfrak{m}}_1,{ \mathfrak{m}}_2)}(\sigma_1)=\textit{\bf Y}\widetilde{\otimes} \bot$.\\
From ${ \mathfrak{m}}_1(\sigma_1^\star)=\textit{\bf N}$ and ${ \mathfrak{m}}_2(\sigma_1^\star)=\bot$, we deduce that $\Psi_{({ \mathfrak{m}}_1,{ \mathfrak{m}}_2)}(\sigma_1^\star)=\textit{\bf N}\widetilde{\otimes} \bot$.\\
From ${ \mathfrak{m}}_1(\sigma_2^\star)=\bot$ and ${ \mathfrak{m}}_2(\sigma_2^\star)=\textit{\bf N}$, we deduce that $\Psi_{({ \mathfrak{m}}_1,{ \mathfrak{m}}_2)}(\sigma_2^\star)=\bot\widetilde{\otimes} \textit{\bf N}$.\\

From $\Psi_{({ \mathfrak{m}}_1,{ \mathfrak{m}}_2)}(\sigma_1)=\textit{\bf Y}\widetilde{\otimes} \bot$ and $\Psi_{({ \mathfrak{m}}_1,{ \mathfrak{m}}_2)}(\sigma_1^\star)=\textit{\bf N}\widetilde{\otimes} \bot$ and the homomorphic properties satisfied by $\Psi_{({ \mathfrak{m}}_1,{ \mathfrak{m}}_2)}$ we deduce that
$\Psi_{({ \mathfrak{m}}_1,{ \mathfrak{m}}_2)}(\bot_{{}_{\mathfrak{S}}})=\Psi_{({ \mathfrak{m}}_1,{ \mathfrak{m}}_2)}(\sigma_1\sqcap_{{}_{\mathfrak{S}}}\sigma_1^\star)=\textit{\bf Y}\widetilde{\otimes} \bot \sqcap_{{}_{{ \mathfrak{B}}^{\widetilde{\otimes}2}}} \textit{\bf N}\widetilde{\otimes} \bot=\bot\widetilde{\otimes} \bot$.\\
On another part, from $\Psi_{({ \mathfrak{m}}_1,{ \mathfrak{m}}_2)}(\sigma_2^\star)=\bot\widetilde{\otimes} \textit{\bf N}$ and $\Psi_{({ \mathfrak{m}}_1,{ \mathfrak{m}}_2)}(\sigma_1^\star)=\textit{\bf N}\widetilde{\otimes} \bot$ and the homomorphic properties satisfied by $\Psi_{({ \mathfrak{m}}_1,{ \mathfrak{m}}_2)}$ we deduce that
$\Psi_{({ \mathfrak{m}}_1,{ \mathfrak{m}}_2)}(\bot_{{}_{\mathfrak{S}}})=\Psi_{({ \mathfrak{m}}_1,{ \mathfrak{m}}_2)}(\sigma_1^\star\sqcap_{{}_{\mathfrak{S}}}\sigma_2^\star)=\bot\widetilde{\otimes}\textit{\bf N}\sqcap_{{}_{{ \mathfrak{B}}^{\widetilde{\otimes}2}}} \textit{\bf N}\widetilde{\otimes} \bot$.\\
We have then obtained a contradiction.\\
As a conclusion,  and using Lemma \ref{lemmabroadcasting}, we deduce that there cannot exist a real morphism $\Psi \in \overline{\mathfrak{C}}({ \mathfrak{S}},{ \mathfrak{S}}\widetilde{\otimes}{ \mathfrak{S}})$ such that $\forall \sigma\in \overline{\mathfrak{S}},\;(\zeta^{{{ \mathfrak{S}}}{{ \mathfrak{S}}}}_{(1)}\circ \Psi)(\sigma) = (\zeta^{{{ \mathfrak{S}}}{{ \mathfrak{S}}}}_{(2)}\circ \Psi)(\sigma) = \sigma$. Hence, there is no broadcasting.
\end{proof}

%\subsection{Existence of entangled states}\label{subsectionentangledstates}

%\begin{eqnarray}
%\Sigma_{AB} & := & (\sigma_1 \widetilde{\otimes} \tau_1 \sqcap_{{}_{\widetilde{S}_{AB}}} \sigma_2 \widetilde{\otimes} \tau_2) \sqcup_{{}_{\widecheck{S}_{AB}}} (\sigma_1^\star  \widetilde{\otimes} \bot_{{}_{ \mathfrak{S}_B}} \sqcap_{{}_{\widetilde{S}_{AB}}} \bot_{{}_{ \mathfrak{S}_A}} \widetilde{\otimes} \tau_1^\star)\\
%& = & (\sigma_1 \widetilde{\otimes} \tau_1 \sqcap_{{}_{\widetilde{S}_{AB}}} \sigma_2 \widetilde{\otimes} \tau_2) \sqcup_{{}_{\widecheck{S}_{AB}}} (\sigma_1 \widetilde{\otimes} \tau_1^\star \sqcap_{{}_{\widetilde{S}_{AB}}} \sigma_1^\star \widetilde{\otimes} \tau_2)\sqcup_{{}_{\widecheck{S}_{AB}}} (\sigma_2 \widetilde{\otimes} \tau_1^\star \sqcap_{{}_{\widetilde{S}_{AB}}} \sigma_1^\star \widetilde{\otimes} \tau_1)\;\;\;\;\;\;\;\;\;\;\;
%\end{eqnarray}

\subsection{Bell non-locality}

During this subsection, we will consider the two copies of one-dimensional indeterministic spaces of states ${{{ \overline{\mathfrak{S}}}}}_{A}={\mathfrak{Z}}'_{N_A}$ and ${{{ \overline{\mathfrak{S}}}}}_{B}={\mathfrak{Z}}'_{N_B}$ (with $N_A,N_B\geq 2$) and we consider the space of states ${{{ \overline{S}}}}_{AB}:={{{ \overline{\mathfrak{S}}}}}_{A}\widetilde{\otimes}{{{ \overline{\mathfrak{S}}}}}_{B}$. We denote ${ \mathfrak{S}}_A={ \mathfrak{J}}^c_{{\overline{ \mathfrak{S}}_A}}$ and ${ \mathfrak{S}}_B={ \mathfrak{J}}^c_{{\overline{ \mathfrak{S}}_B}}$.  The space of states (called two-dimensional indeterministic space of states) has been build, according to the subsection \ref{subsectiontensorcompleterealspace}, as the ontic completion ${{{ \widetilde{S}}}}_{AB}:={ \mathfrak{S}}_{A}\widetilde{\otimes} { \mathfrak{S}}_{B}={ \mathfrak{J}}^c_{{\overline{ \mathfrak{S}}_A}\widetilde{\otimes}{\overline{ \mathfrak{S}}_B}}$. If necessary, in order to shorten our formulas, we will eventually replace the notation ${{{ \overline{S}}}}{}_{AB}$ by $\overline{\mathfrak{S}}$ and the notation ${{{ \widetilde{S}}}}{}_{AB}$ by ${\mathfrak{S}}$.\\

First of all, let ${\Sigma}$ be an element of ${ \mathfrak{S}}_{A}{\widehat{\otimes}}{ \mathfrak{S}}_{B}$.  Let ${\phi}_1\in \overline{ \mathfrak{C}}({ \mathfrak{S}}_A,{ \mathfrak{B}})$ et ${\phi}_2\in \overline{ \mathfrak{C}}({ \mathfrak{S}}_A,{ \mathfrak{B}})$ be a pair of compatible real measurements, and ${\rho}_1\in \overline{ \mathfrak{C}}({ \mathfrak{S}}_B,{ \mathfrak{B}})$ et ${\rho}_2\in \overline{ \mathfrak{C}}({ \mathfrak{S}}_B,{ \mathfrak{B}})$ another pair of compatible real measurements. Then, there exists an element $\Lambda \in  { \mathfrak{B}}^{{\widetilde{\otimes}}4}$ such that
\begin{eqnarray}
\zeta^{{ \mathfrak{B}}^{{\widetilde{\otimes}}4}}_{(1)(3)}(\Lambda) &=&(\phi_1\widetilde{\otimes} \rho_1)({\Sigma})\label{bell1}\\
\zeta^{{ \mathfrak{B}}^{{\widetilde{\otimes}}4}}_{(1)(4)}(\Lambda) &=&(\phi_1\widetilde{\otimes} \rho_2)({\Sigma})\label{bell2}\\
\zeta^{{ \mathfrak{B}}^{{\widetilde{\otimes}}4}}_{(2)(3)}(\Lambda) &=&(\phi_2\widetilde{\otimes} \rho_1)({\Sigma})\label{bell3}\\
\zeta^{{ \mathfrak{B}}^{{\widetilde{\otimes}}4}}_{(2)(4)}(\Lambda) &=&(\phi_2\widetilde{\otimes} \rho_2)({\Sigma})\label{bell4}
\end{eqnarray}
Indeed, it suffices to define
\begin{eqnarray*}
\Lambda &:=& (\Psi_{(\phi_1,\phi_2)}\widetilde{\otimes} \Psi_{(\rho_1,\rho_2)})({\Sigma}),
\end {eqnarray*}
where $\Psi_{(\phi_1,\phi_2)}\in \overline{ \mathfrak{C}}({ \mathfrak{S}}_A,{ \mathfrak{B}}{\widetilde{\otimes}}{ \mathfrak{B}})$ is the joint channel associated to $\phi_1$ and $\phi_2$, and $\Psi_{(\rho_1,\rho_2)}\in \overline{ \mathfrak{C}}({ \mathfrak{S}}_D,{ \mathfrak{B}}{\widetilde{\otimes}}{ \mathfrak{B}})$ is the joint channel associated to $\rho_1$ and $\rho_2$.\\

Secondly, if ${\Sigma}$ is a real state (i.e. ${\Sigma}\in \overline{{S}}_{AB}=\overline{ \mathfrak{S}}_{A}{\widetilde{\otimes}}\overline{ \mathfrak{S}}_{B}$),  then there exists an element $\Lambda \in  { \mathfrak{B}}^{{\widetilde{\otimes}}4}$ such that the four relations (\ref{bell1}) (\ref{bell2}) (\ref{bell3}) (\ref{bell4}) are simultaneously satisfied. Indeed,  if ${\Sigma}=\bigsqcap{}^{{}^{ \mathfrak{S}_A\widetilde{\otimes}\mathfrak{S}_B}}_{i\in I} \sigma_{i,A}\widetilde{\otimes} \sigma_{i,B}$ with $\{(\sigma_{i,A}, \sigma_{i,B})\;\vert\;i\in I\}\subseteq \overline{ \mathfrak{S}}{}_{A}^{{}^{pure}}{\times}\overline{ \mathfrak{S}}{}^{{}^{pure}}_{B}$, it suffices to define
\begin{eqnarray}
\Lambda & = & \bigsqcap{}^{{}^{{ \mathfrak{B}}^{{\widetilde{\otimes}}4}}}_{i\in I} \phi_1(\sigma_{i,A})\widetilde{\otimes}\phi_2(\sigma_{i,A})\widetilde{\otimes}\rho_1(\sigma_{i,B})\widetilde{\otimes} \rho_2(\sigma_{i,B}).
\end{eqnarray}

On the contrary, we can now introduce the notion of Bell non-locality in a perspective that is adapted to our purpose.  Note that it has been shown by M. Plávala \cite{PlavalaA} that the standard notion of Bell non-locality given as in the CHSH scenario \cite{CHSH} is a particular case of the general framework on Bell non-locality adapted to Generalized Probabilistic Theories (GPT) (see the introduction by M. Plávala \cite{Plavala}).\\
\begin{lemma}
%If there exists no state $\Lambda \in  { \mathfrak{B}}^{{\widetilde{\otimes}}4}$ such that the four relations  (\ref{bell1}) (\ref{bell2}) (\ref{bell3}) (\ref{bell4}) are simultaneously satisfied, then the state ${\Sigma}$ is said to be {\em Bell non-local with respect to pairs of real measurements} ${\phi}_1\in \overline{ \mathfrak{C}}({ \mathfrak{S}}_A,{ \mathfrak{B}})$ and ${\phi}_2\in \overline{ \mathfrak{C}}({ \mathfrak{S}}_A, { \mathfrak{B}})$,  ${\rho}_1\in \overline{ \mathfrak{C}}({ \mathfrak{S}}_D,{ \mathfrak{B}})$ and ${\rho}_2\in \overline{ \mathfrak{C}}({ \mathfrak{S}}_D,{ \mathfrak{B}})$.\\
If there exists two pairs of real measurements ${\phi}_1\in \overline{ \mathfrak{C}}({ \mathfrak{S}}_A,{ \mathfrak{B}})$ and ${\phi}_2\in \overline{ \mathfrak{C}}({ \mathfrak{S}}_A,{ \mathfrak{B}})$,  ${\rho}_1\in \overline{ \mathfrak{C}}({ \mathfrak{S}}_D,{ \mathfrak{B}})$ and ${\rho}_2\in \overline{ \mathfrak{C}}({ \mathfrak{S}}_D, { \mathfrak{B}})$ such that there exists no state $\Lambda \in  { \mathfrak{B}}^{{\widetilde{\otimes}}4}$ satisfying  simultaneously the four relations  (\ref{bell1}) (\ref{bell2}) (\ref{bell3}) (\ref{bell4}), then the state ${\Sigma}$ is {\em a Bell non-local state}.
\end{lemma}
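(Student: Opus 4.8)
The plan is to read this statement as the faithful transcription, into our $\mathfrak{B}$-valued Chu framework, of the Generalized-Probabilistic-Theory criterion for Bell non-locality, and to show that the non-existence of the joint object $\Lambda$ is exactly the failure of a local (global-section) explanation of the correlations produced by $\Sigma$. Here ``Bell non-local state'' denotes the GPT notion, which by Plávala's result \cite{PlavalaA} subsumes the CHSH scenario; accordingly the argument is a matching of our data to that framework together with a well-posedness check, not a computation. Concretely, I would establish that a state $\Lambda\in\mathfrak{B}^{\widetilde{\otimes}4}$ satisfying (\ref{bell1})--(\ref{bell4}) \emph{is} a global joint valuation reproducing the four observable bipartite statistics, so that its absence is precisely Bell non-locality.

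First I would record the structure of the target space. Since $\mathfrak{B}=\mathfrak{Z}_2$ is a simplex, Lemma \ref{SASBditribandcup} gives that $\mathfrak{B}^{\widetilde{\otimes}4}$ is again a simplex; hence its pure states are exactly the sixteen pure tensors $\alpha_1\widetilde{\otimes}\alpha_2\widetilde{\otimes}\alpha_3\widetilde{\otimes}\alpha_4$ with $\alpha_i\in\{\textit{\bf Y},\textit{\bf N}\}$, and every element of $\mathfrak{B}^{\widetilde{\otimes}4}$ is an infimum of such pure tensors. A state $\Lambda\in\mathfrak{B}^{\widetilde{\otimes}4}$ is therefore precisely a (possibilistic) joint valuation of the four local measurements $\phi_1,\phi_2,\rho_1,\rho_2$ --- the lattice-theoretic avatar of a local hidden-variable assignment, i.e. of a global section of an empirical model. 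The four partial traces $\zeta^{\mathfrak{B}^{\widetilde{\otimes}4}}_{(a)(b+2)}$ are then the marginalisations onto the four measurement contexts $(\phi_a,\rho_b)$, and relations (\ref{bell1})--(\ref{bell4}) say exactly that this single joint valuation restricts to each observed bipartite table $(\phi_a\widetilde{\otimes}\rho_b)(\Sigma)$.

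Next I would verify that the four correlation tables $(\phi_a\widetilde{\otimes}\rho_b)(\Sigma)\in\mathfrak{B}\widetilde{\otimes}\mathfrak{B}$ form a genuine (no-signalling) empirical model, so that asking for a global section is the legitimate Bell question. This is where the partial-trace machinery enters: because $\zeta_{(1)}$ commutes with the local morphisms (Theorem \ref{theorempartialtracesminimal}, Lemma \ref{generictensorpartialtraces}) and the tensor of morphisms is bimorphic, one gets $\zeta^{\mathfrak{B}\mathfrak{B}}_{(1)}\big((\phi_1\widetilde{\otimes}\rho_1)(\Sigma)\big)=\phi_1\big(\zeta^{\mathfrak{S}_A\mathfrak{S}_B}_{(1)}(\Sigma)\big)=\zeta^{\mathfrak{B}\mathfrak{B}}_{(1)}\big((\phi_1\widetilde{\otimes}\rho_2)(\Sigma)\big)$, and symmetrically for Bob. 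Thus Alice's marginal statistics are independent of Bob's setting and conversely, which is the no-signalling consistency making $\{(\phi_a\widetilde{\otimes}\rho_b)(\Sigma)\}_{a,b\in\{1,2\}}$ a well-defined empirical model whose candidate global sections are exactly the $\Lambda$ of Step one.

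The conclusion then follows directly: a bipartite state is Bell-local precisely when the correlations it produces under all local measurement choices admit such a global joint valuation, and Bell non-local otherwise. Since the hypothesis of the Lemma is \emph{exactly} that no $\Lambda\in\mathfrak{B}^{\widetilde{\otimes}4}$ satisfies (\ref{bell1})--(\ref{bell4}), the empirical model attached to $\Sigma$ has no global section, and therefore $\Sigma$ is Bell non-local; this is the implication to be proved. I would close by noting that the notion is non-vacuous and correctly situated: the ``First of all'' paragraph shows $\Lambda$ always exists when the two local pairs are jointly compatible, and the ``Secondly'' paragraph shows $\Lambda$ always exists when $\Sigma$ is a real (separable) state, so a Bell non-local $\Sigma$ is necessarily a hidden (entangled) state probed by locally incompatible settings. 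The main obstacle is the identification in Steps one and three --- rigorously pinning down that a state of the simplex $\mathfrak{B}^{\widetilde{\otimes}4}$ is the same datum as a global section / local-hidden-variable model, that $\zeta_{(a)(b+2)}$ is the correct marginalisation, and that the no-signalling check of Step two makes the global-section question coincide with GPT Bell non-locality in the sense of \cite{PlavalaA}.
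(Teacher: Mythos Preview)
Your proposal is substantively correct and is in fact considerably more detailed than the paper's own proof, which consists entirely of the single line ``see \cite{PlavalaA}.'' The paper treats this lemma as an immediate import from Pl\'avala's GPT framework: the non-existence of a joint $\Lambda\in\mathfrak{B}^{\widetilde{\otimes}4}$ reproducing the four marginals \emph{is} the GPT criterion for Bell non-locality, and the citation is the entire justification.

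What you do differently is unpack that identification explicitly: you argue that (i) $\mathfrak{B}^{\widetilde{\otimes}4}$ being a simplex makes $\Lambda$ a possibilistic joint valuation, i.e.\ a local-hidden-variable assignment; (ii) the partial traces $\zeta_{(a)(b+2)}$ are the marginalisations onto the four contexts; and (iii) the no-signalling check ensures the four tables form a legitimate empirical model. This is all correct and makes the lemma self-contained rather than a black-box citation. The trade-off is that the paper's approach is shorter and leans on an external reference for the conceptual work, while your approach internalises that work at the cost of length --- and, as you yourself flag, the main burden is the careful matching of the Chu/possibilistic data to the GPT notion in \cite{PlavalaA}, which the paper simply takes as given.
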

\begin{proof}
see \cite{PlavalaA}.
\end{proof}

We have the following fundamental result.

\begin{theorem}
Bell non-local states do exist in ${{{ \widetilde{S}}}}_{AB}$.
\end{theorem}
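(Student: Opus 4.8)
The plan is to exhibit an explicit Bell non-local state in ${{{ \widetilde{S}}}}_{AB}$ as a \emph{hidden state} arising from the ontic completion, built from a single entangled-like element of ${{{ \overline{S}}}}_{AB}$. The strategy mirrors the standard quantum argument (a maximally entangled state violates a CHSH-type constraint), translated into the present framework where the obstruction is the non-existence of a joint $\Lambda\in{ \mathfrak{B}}^{\widetilde{\otimes}4}$ reproducing all four marginals.

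First I would pick pure states in each factor. Since $\overline{ \mathfrak{S}}_A={\mathfrak{Z}}'_{N_A}$ and $\overline{ \mathfrak{S}}_B={\mathfrak{Z}}'_{N_B}$ with $N_A,N_B\geq 2$, choose $\alpha_0,\alpha_1\in\overline{ \mathfrak{S}}_A^{{}^{pure}}$ with $\alpha_1\neq\alpha_0,\alpha_0^\star$ and $\beta_0,\beta_1\in\overline{ \mathfrak{S}}_B^{{}^{pure}}$ with $\beta_1\neq\beta_0,\beta_0^\star$. Consider the candidate hidden state obtained as the supremum in ${\mathfrak{S}}={{{ \widetilde{S}}}}_{AB}$ of two pure tensors, of the form $\chi=(\alpha_0\widetilde{\otimes}\beta_0)^\star\sqcup_{{}_{\mathfrak{S}}}(\alpha_0\widetilde{\otimes}\beta_0\sqcap_{{}_{\overline{ \mathfrak{S}}}}\alpha_1\widetilde{\otimes}\beta_1)$; the case analysis carried out in Section~\ref{sectionmultidimensional} (case 3C, equation~(\ref{computationexample})) shows exactly when this lies in ${\mathfrak{S}}\smallsetminus\overline{\mathfrak{S}}$ and computes $\Theta^{\overline{ \mathfrak{S}}}(\chi)$ as a three-element set $\{\gamma,\delta,\epsilon\}$ with $\gamma=(\alpha_0\widetilde{\otimes}\beta_0\sqcap_{{}_{\overline{ \mathfrak{S}}}}\alpha_1\widetilde{\otimes}\beta_1)$, $\delta=(\alpha_0\widetilde{\otimes}\beta_0^\star\sqcap_{{}_{\overline{ \mathfrak{S}}}}\alpha_0^\star\widetilde{\otimes}\beta_1)$, $\epsilon=(\alpha_1\widetilde{\otimes}\beta_0^\star\sqcap_{{}_{\overline{ \mathfrak{S}}}}\alpha_0^\star\widetilde{\otimes}\beta_0)$. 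The point of working with such a $\chi$ is that it genuinely mixes "ontic components" in a way no single real state does, and it is precisely this irreducible contextuality (Theorem~\ref{DSigmaisomorphism}, Theorem~\ref{theoremirreducibilitytensor}) that should obstruct a global $\Lambda$.

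Next I would choose the four local real measurements. Take ${\phi}_i:={ \mathfrak{m}}_{{ \mathfrak{l}}_{(\alpha_i,\alpha_i^\star)}}$ for $i=1,2$ on ${ \mathfrak{S}}_A$ (with $\alpha_1:=\alpha_0$ replaced appropriately so that the two measurements are genuinely distinct, e.g. $\alpha$ and some $\alpha'$ with $\alpha'\neq\alpha,\alpha^\star$), and ${\rho}_j:={ \mathfrak{m}}_{{ \mathfrak{l}}_{(\beta_j,\beta_j^\star)}}$ for $j=1,2$ on ${ \mathfrak{S}}_B$. Each pair is jointly compatible by the general compatibility results of Section~\ref{sectionontic}. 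Then I would compute $(\phi_i\widetilde{\otimes}\rho_j)(\chi)$ for the four choices of $(i,j)$; since $\phi_i\widetilde{\otimes}\rho_j$ is a homomorphism ${\mathfrak{S}}\to{ \mathfrak{B}}\widetilde{\otimes}{ \mathfrak{B}}$ and $\chi$ is a $\sqcup$ of two pure tensors whose $\Theta$-decomposition is the three-element set above, these values are the infima over the three components, and they can be read off by applying the evaluation-map formulas (\ref{defepsilonS}). The key computation is that the four resulting two-bit correlations are mutually inconsistent: there is no $\Lambda\in{ \mathfrak{B}}^{\widetilde{\otimes}4}$ with $\zeta_{(1)(3)}(\Lambda)=(\phi_1\widetilde{\otimes}\rho_1)(\chi)$, etc. Concretely, a $\Lambda$ realizing all four marginals would, by applying appropriate partial traces and using $\Lambda\sqcap_{{}_{{ \mathfrak{B}}^{\widetilde{\otimes}4}}}(\ldots)$, force a value at $\bot_{{}_{{ \mathfrak{B}}^{\widetilde{\otimes}4}}}$ or at a single coordinate that contradicts one of the marginals — exactly the bottleneck already encountered in the no-broadcasting proof, but now across four measurement slots. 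So the plan is: assume $\Lambda$ exists, push the constraints (\ref{bell1})--(\ref{bell4}) down through the $\zeta$'s, combine them using the monoid/infimum structure of ${ \mathfrak{B}}$, and derive a contradiction of the form $\textit{\bf Y}=\textit{\bf N}$ or $\bot\widetilde{\otimes}\bot=\textit{\bf N}\widetilde{\otimes}\textit{\bf N}$.

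The main obstacle will be the bookkeeping: one must choose $\alpha,\alpha',\beta,\beta'$ so that (i) $\chi$ actually is a hidden state (not accidentally in $\overline{\mathfrak{S}}$, for which the $\not=$ conditions above and $\sigma^\star\not\sqsubseteq$ conditions must all hold) and (ii) the four pairs $(\phi_i,\rho_j)$ produce a \emph{genuinely} non-joinable family — a naive choice may give marginals that happen to be jointly realizable. I expect the correct choice to be the analogue of the CHSH "tilted" bases, and the verification that no $\Lambda$ exists to require enumerating the possible values of $\Lambda$ on the relevant coordinates (each in $\{\textit{\bf Y},\textit{\bf N},\bot\}$) and ruling each out using the four marginal equations together with the homomorphic property $\Lambda\mapsto\zeta_{(i)(j)}(\Lambda)$ and the definition (\ref{expressionbullet}) of $\bullet$. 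Once the contradiction is obtained, the previous lemma (via \cite{PlavalaA}) immediately yields that $\chi$ is a Bell non-local state of ${{{ \widetilde{S}}}}_{AB}$, completing the proof. I would also remark that this is consistent with the earlier analysis showing no real state in ${{{ \overline{S}}}}_{AB}$ can be Bell non-local (a $\Lambda$ always exists for pure tensors and their infima), so the non-locality is carried entirely by the hidden states produced by the ontic completion $\widehat{\otimes}$ — which is the conceptual payoff of the whole construction.
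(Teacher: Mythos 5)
Your construction is, up to renaming, exactly the paper's own proof: with $\sigma_1=\alpha_0$, $\sigma_2=\alpha_1$, $\tau_1=\beta_0$, $\tau_2=\beta_1$ your hidden state $(\alpha_0\widetilde{\otimes}\beta_0)^\star\sqcup(\alpha_0\widetilde{\otimes}\beta_0\sqcap\alpha_1\widetilde{\otimes}\beta_1)$ is the state $\Sigma$ used in the paper (recall $(\alpha_0\widetilde{\otimes}\beta_0)^\star=\alpha_0^\star\widetilde{\otimes}\bot\sqcap\bot\widetilde{\otimes}\beta_0^\star$), its three-element $\Theta$-decomposition is the one the paper recomputes, the measurements ${ \mathfrak{m}}_{{ \mathfrak{l}}_{(\alpha_i,\alpha_i^\star)}}$, ${ \mathfrak{m}}_{{ \mathfrak{l}}_{(\beta_j,\beta_j^\star)}}$ are the same, and so is the strategy of refuting the existence of a joint $\Lambda\in{ \mathfrak{B}}^{\widetilde{\otimes}4}$.

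Two points need fixing. First, your claim that ``each pair is jointly compatible by the general compatibility results'' is false and, taken literally, would destroy the argument: for jointly compatible pairs the joint morphisms $\Psi_{(\phi_1,\phi_2)}$ and $\Psi_{(\rho_1,\rho_2)}$ exist and $\Lambda:=(\Psi_{(\phi_1,\phi_2)}\widetilde{\otimes}\Psi_{(\rho_1,\rho_2)})(\Sigma)$ automatically satisfies all four marginal relations, so no contradiction could ever be derived. Bell non-locality requires the two local pairs to be \emph{incompatible}, and with $\alpha_1\neq\alpha_0,\alpha_0^\star$ and $\beta_1\neq\beta_0,\beta_0^\star$ on the quantum-bit completions they indeed are (this is what the paper notes at the end of its proof); nothing in the contradiction derivation uses compatibility, so simply drop that sentence. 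Second, the decisive step is only sketched in your plan: one must actually compute the four marginals, e.g.\ $(\phi_1\widetilde{\otimes}\rho_1)(\Sigma)=\textit{\bf N}\,\widetilde{\otimes}\bot\;\sqcap\;\textit{\bf Y}\,\widetilde{\otimes}\textit{\bf N}$ and $(\phi_2\widetilde{\otimes}\rho_2)(\Sigma)=\bot\,\widetilde{\otimes}\bot$, show that the first three relations force $\Lambda$ to a specific element of ${ \mathfrak{B}}^{\widetilde{\otimes}4}$, and check that its $(2)(4)$-trace equals $\textit{\bf Y}\,\widetilde{\otimes}\bot\;\sqcap\;\bot\,\widetilde{\otimes}\textit{\bf Y}\neq\bot\,\widetilde{\otimes}\bot$. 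Since your choices coincide with the paper's, this computation goes through exactly as you anticipate, but it is the heart of the theorem and cannot be left implicit.
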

\begin{proof}
We will consider $\sigma_1,\sigma_2\in \overline{\mathfrak{S}}{}^{{}^{pure}}_{A}$ and $\tau_1,\tau_2\in \overline{\mathfrak{S}}{}^{{}^{pure}}_{B}$ such that $\sigma_1\not=\sigma_2,\sigma_2^\star$ and $\tau_1\not=\tau_2,\tau_2^\star$.  Let us now introduce a particular hidden-state. 

\begin{eqnarray}
\Sigma & := & (\sigma_1 \widetilde{\otimes} \tau_1 \sqcap_{{}_{\overline{S}_{AB}}} \sigma_2 \widetilde{\otimes} \tau_2) \sqcup_{{}_{\hat{S}_{AB}}} (\sigma_1^\star  \widetilde{\otimes} \bot_{{}_{ \mathfrak{S}_B}} \sqcap_{{}_{\overline{S}_{AB}}} \bot_{{}_{ \mathfrak{S}_A}} \widetilde{\otimes} \tau_1^\star).
\end{eqnarray}
This state do exist in the ontic completion $\hat{S}_{AB}:={ \mathfrak{J}}^c_{\overline{{S}}_{AB}}$ of the real space $\overline{ {S}}_{AB}:={\overline{ \mathfrak{S}}_A}\widetilde{\otimes}{\overline{ \mathfrak{S}}_B}$.  More precisely, we have $\Sigma\in \hat{S}_{AB}\smallsetminus \overline{S}_{AB}$.

Indeed, as shown in subsection \ref{subsectionpreliminaryNdim}, we have
\begin{eqnarray}
&&\hspace{-2cm} cl_c^{\overline{{S}}_{AB}}\left(
\left\{(\sigma_1 \widetilde{\otimes} \tau_1 \sqcap_{{}_{\overline{S}_{AB}}} \sigma_2 \widetilde{\otimes} \tau_2)\; ,\; (\sigma_1^\star  \widetilde{\otimes} \bot_{{}_{ \mathfrak{S}_B}} \sqcap_{{}_{\overline{S}_{AB}}} \bot_{{}_{ \mathfrak{S}_A}} \widetilde{\otimes} \tau_1^\star)\right\}\right)=\nonumber\\
&&\hspace{-1cm} =\left\{(\sigma_1 \widetilde{\otimes} \tau_1 \sqcap_{{}_{\overline{S}_{AB}}} \sigma_2 \widetilde{\otimes} \tau_2)\;,\; (\sigma_1 \widetilde{\otimes} \tau_1^\star \sqcap_{{}_{\overline{S}_{AB}}} \sigma_1^\star \widetilde{\otimes} \tau_2) \;,\; (\sigma_2 \widetilde{\otimes} \tau_1^\star \sqcap_{{}_{\overline{S}_{AB}}} \sigma_1^\star \widetilde{\otimes} \tau_1)\right\}\;\in {\mathcal{K}}(\overline{{S}}_{AB})\;\;\;\;\;\;\;\;\;\;\;\;
\end{eqnarray}
%ZZZZ
Let us now consider the following real measurement operators 
\begin{eqnarray}
{\phi}_1:={ \mathfrak{m}}_{{ \mathfrak{l}}_{(\sigma_1,\sigma_1^\star)}}, && {\phi}_2:={ \mathfrak{m}}_{{ \mathfrak{l}}_{(\sigma_2,\sigma_2^\star)}}, \\
{\rho}_1:={ \mathfrak{m}}_{{ \mathfrak{l}}_{(\tau_1,\tau_1^\star)}}, &&{\rho}_2:={ \mathfrak{m}}_{{ \mathfrak{l}}_{(\tau_2,\tau_2^\star)}}.
\end{eqnarray}  
We now intent to prove that $\Sigma$ is a Bell non-local state with respect to the pairs of incompatible real measurements $(\phi_1,\phi_2)$ and $(\rho_1,\rho_2)$. In the following computation we use obviously the defining property (\ref{morphismtensorindeterminism}).\\

The equality
\begin{eqnarray}
\Phi_{13} :=(\phi_1\widetilde{\otimes} \rho_1)({\Sigma})= \textit{\bf N} \,\widetilde{\otimes} \bot \sqcap_{{}_{{ \mathfrak{B}}^{\widetilde{\otimes}2}}} \textit{\bf Y} \widetilde{\otimes}\, \textit{\bf N}.
\end{eqnarray} 
and the property $\zeta^{{ \mathfrak{B}} { \mathfrak{B}} { \mathfrak{B}} { \mathfrak{B}}}_{(1)(3)}(\Lambda) = \Phi_{13}$ imply
\begin{eqnarray}
\Lambda =  \bigsqcap{}^{{}^{{ \mathfrak{B}}^{\widetilde{\otimes}4}}}_i\textit{\bf N} \,\widetilde{\otimes} \alpha_i \,\widetilde{\otimes}\textit{\bf N} \,\widetilde{\otimes} \beta_i \; \sqcap_{{}_{{ \mathfrak{B}}^{\widetilde{\otimes}4}}}
\bigsqcap{}^{{}^{{ \mathfrak{B}}^{\widetilde{\otimes}4}}}_j\textit{\bf N} \,\widetilde{\otimes} \gamma_j \,\widetilde{\otimes}\textit{\bf Y} \,\widetilde{\otimes} \delta_j \; \sqcap_{{}_{{ \mathfrak{B}}^{\widetilde{\otimes}4}}} \bigsqcap{}^{{}^{{ \mathfrak{B}}^{\widetilde{\otimes}4}}}_k\textit{\bf Y} \,\widetilde{\otimes} \epsilon_k \,\widetilde{\otimes}\textit{\bf N} \,\widetilde{\otimes} \varphi_k \;\;\;\;\;\;\;\;\;\;\;
\end{eqnarray} 
for certain collections $(\alpha_i,\beta_i)_i$, $(\gamma_j,\delta_j)_j$, $(\epsilon_k,\varphi_k)_k$ of pairs of elements of $\{\,\textit{\bf Y}\,,\, \textit{\bf N}\,\}$.\\

The equality
\begin{eqnarray}
\Phi_{14} :=(\phi_1\widetilde{\otimes} \rho_2)({\Sigma})= \textit{\bf Y} \,\widetilde{\otimes} \bot \sqcap_{{}_{{ \mathfrak{B}}^{\widetilde{\otimes}2}}} \textit{\bf N} \widetilde{\otimes}\, \textit{\bf Y}.
\end{eqnarray} 
and the property $\zeta^{{ \mathfrak{B}} { \mathfrak{B}} { \mathfrak{B}} { \mathfrak{B}}}_{(1)(4)}(\Lambda) = \Phi_{14}$ imply
\begin{eqnarray}
\bigsqcap{}^{{}^{{ \mathfrak{B}}^{\widetilde{\otimes}2}}}_i\textit{\bf N}  \,\widetilde{\otimes} \beta_i \; \sqcap_{{}_{{ \mathfrak{B}}^{\widetilde{\otimes}2}}}
\bigsqcap{}^{{}^{{ \mathfrak{B}}^{\widetilde{\otimes}2}}}_j\textit{\bf N}  \,\widetilde{\otimes} \delta_j \; \sqcap_{{}_{{ \mathfrak{B}}^{\widetilde{\otimes}2}}} \bigsqcap{}^{{}^{{ \mathfrak{B}}^{\widetilde{\otimes}2}}}_k\textit{\bf Y}  \,\widetilde{\otimes} \varphi_k  &=& \textit{\bf Y} \,\widetilde{\otimes} \bot \sqcap_{{}_{{ \mathfrak{B}}^{\widetilde{\otimes}2}}} \textit{\bf N} \widetilde{\otimes}\, \textit{\bf Y},\;\;\;\;\;\;\;\;\;\;\;\;\;\;\;\;\;\; 
\end{eqnarray} 
and then
\begin{eqnarray}
\bigwedge{}_i \beta_i \wedge
\bigwedge{}_j \delta_j = \textit{\bf Y}, & & \bigwedge{}_k \varphi_k = \bot.
\end{eqnarray}

The equality
\begin{eqnarray}
\Phi_{23}:=(\phi_2\widetilde{\otimes} \rho_1)({\Sigma}) = \textit{\bf Y} \,\widetilde{\otimes} \textit{\bf N} \sqcap_{{}_{{ \mathfrak{B}}^{\widetilde{\otimes}2}}} \bot \widetilde{\otimes}\, \textit{\bf Y}.
\end{eqnarray} 
and $\zeta^{{ \mathfrak{B}} { \mathfrak{B}} { \mathfrak{B}} { \mathfrak{B}}}_{(2)(3)}(\Lambda) = \Phi_{23}$ imply 
\begin{eqnarray}
\bigsqcap{}^{{}^{{ \mathfrak{B}}^{\widetilde{\otimes}2}}}_i \alpha_i  \,\widetilde{\otimes} \textit{\bf N} \; \sqcap_{{}_{{ \mathfrak{B}}^{\widetilde{\otimes}2}}}
 \bigsqcap{}^{{}^{{ \mathfrak{B}}^{\widetilde{\otimes}2}}}_j \gamma_j  \,\widetilde{\otimes} \textit{\bf Y} \; \sqcap_{{}_{{ \mathfrak{B}}^{\widetilde{\otimes}2}}} \bigsqcap{}^{{}^{{ \mathfrak{B}}^{\widetilde{\otimes}2}}}_k \epsilon_k  \,\widetilde{\otimes} \textit{\bf N}  &=& \textit{\bf Y} \,\widetilde{\otimes} \textit{\bf N} \sqcap_{{}_{{ \mathfrak{B}}^{\widetilde{\otimes}2}}} \bot \widetilde{\otimes}\, \textit{\bf Y},\;\;\;\;\;\;\;\;\;\;\;\;\;\;\;\;\;\; 
\end{eqnarray} 
and then
\begin{eqnarray}
\bigwedge{}_i \alpha_i \wedge \bigwedge{}_k \epsilon_k = \textit{\bf Y}, &  & \bigwedge{}_j \gamma_j = \bot.
\end{eqnarray}

If we summarize what we have obtained, we have
\begin{eqnarray}
\Lambda &=& \textit{\bf N} \,\widetilde{\otimes} \textit{\bf Y} \,\widetilde{\otimes}\textit{\bf N} \,\widetilde{\otimes} \textit{\bf Y} \; \sqcap_{{}_{{ \mathfrak{B}}^{\widetilde{\otimes}4}}}
\textit{\bf N} \,\widetilde{\otimes} \bot \,\widetilde{\otimes}\textit{\bf Y} \,\widetilde{\otimes} \textit{\bf Y} \; 
\sqcap_{{}_{{ \mathfrak{B}}^{\widetilde{\otimes}4}}} 
\textit{\bf Y} \,\widetilde{\otimes} \textit{\bf Y} \,\widetilde{\otimes}\textit{\bf N} \,\widetilde{\otimes} \bot 
\end{eqnarray}

Now we use the equality
\begin{eqnarray}
\Phi_{24} :=(\phi_2\widetilde{\otimes} \rho_2)({\Sigma})= \bot \,\widetilde{\otimes} \bot.
\end{eqnarray} 
and  $\zeta^{{ \mathfrak{B}} { \mathfrak{B}} { \mathfrak{B}} { \mathfrak{B}}}_{(2)(4)}(\Lambda) = \Phi_{24}$ to obtain the announced contradiction. Indeed, we have
\begin{eqnarray}
\textit{\bf Y} \,\widetilde{\otimes} \bot \sqcap_{{}_{{ \mathfrak{B}}^{\widetilde{\otimes}2}}} \bot \widetilde{\otimes} \textit{\bf Y} \not= \bot \,\widetilde{\otimes}\, \bot.
\end{eqnarray} 

As a  conclusion,  the state $\Sigma$ is Bell non-local and the pairs $(\phi_1,\phi_2)$ et $(\rho_1,\rho_2)$ are pairs of incompatible real measurements. 
\end{proof}

\section{Conclusion}

My ongoing research program \cite{Buffenoir2021}\cite{Buffenoir2022}\cite{Buffenoir2025} seeks to devise a method for overcoming the previously mentioned no-go theorems \cite{Pullmanova}\cite{HudsonPullmanova}\cite{Pykacz}, which concern the impossibility of constructing a tensor product suitable for describing composite systems starting from the logical formulation of single systems. The approach pursued relies on a formalism closely related to that originally developed by G. Birkhoff and J. von Neumann, %, thereby setting aside the categorical framework advanced by S. Abramsky, B. Coecke and collaborators.
 but give a central role to the existence of a tensor product to describe experiments on compound systems.\\
In Section 2, after presenting the formalism of States/Effects Chu spaces together with their morphisms, we introduced the notion of real structure, distinguishing between {\em real states} and {\em hidden states}. We then specified how determinism and indeterminism are defined in this setting. Sections 3 and 5 refined the definition of real structure by showing that generalized state spaces including hidden states arise (i) through a natural completion process applied to real \underline{indeterministic} state spaces, and (ii) in connection with \underline{contextual} empirical models derived from the operational description of such real spaces. Section 4 established the conditions tensor products of state spaces must satisfy to allow the construction of compound systems from their components, and highlighted that the simplest solution also accounts for composite deterministic (classical) systems. Section 6 extended the discussion with a proposal for representing compound indeterministic systems, making use of the ontic completions developed in Section 3. This proposal remained general, and Section 7 offered a more precise study of the iterated tensor product of basic indeterministic systems% (e.g., multiple quantum bits)
. The resulting structure proved to approximate as closely as possible that of \underline{irreducible Hilbert geometries} although this type of structure had not been imposed ab initio in our construction. Finally, Section 8 demonstrated that our framework explicitly reproduces key quantum-like phenomena, such as the \underline{no-broadcasting theorem} and the \underline{existence of {\em Bell non-local} states} in our framework. Our primary objective has therefore been fully achieved, which is a result of paramount importance in quantum logic. %The paper has opened several new questions and areas of research.  One possible area of research would be to examine whether the divergence between the structure we identified in Section \ref{sectionmultidimensional} and a true Hilbert geometry could be emphasized with the help of a real operational setting. This will be the main topic of our forthcoming work.

\section{Appendix}

Let us first consider $\sigma_1,\sigma_2,\sigma_3,\sigma_4$ all distinct elements in $({\check{\mathfrak{S}}}\smallsetminus \overline{\mathfrak{S}})$.\\
Using properties (\ref{Rek1}) and $\sigma_1\asymp\sigma_2, \sigma_2\asymp\sigma_3, \sigma_3\asymp\sigma_4, \sigma_1\asymp\sigma_4$ and (\ref{thirdcoveringpropertySbarv}), we note that we can always choose $\pi_i,\rho_i\in \overline{\mathfrak{S}}$ for $i=1,2,3,4$ such that
\begin{eqnarray}
\forall i=1,2,3,4,&&\pi_i,\rho_i\in \Theta^{\overline{\mathfrak{S}}}(\sigma_i),\;\;\sigma_i=\pi_i\sqcup_{{}_{\overline{ \mathfrak{S}}}} \rho_i \\
&&\pi_1=\pi_2,\;\;\pi_3=\pi_4,\;\; \rho_2=\rho_3,\;\;\rho_1=\rho_4\label{iddefREC}
\end{eqnarray}
Let us now assume that there exists $\lambda\in {\mathfrak{G}}^{\check{\mathfrak{S}}}$ such that ${\mathfrak{r}}^U(\lambda,\sigma_1,\sigma_2)$ and ${\mathfrak{r}}^U(\lambda,\sigma_3,\sigma_4)$.  In other words,
\begin{eqnarray}
\lambda\sqcoversupset_{{}_{{ \mathfrak{S}}}} \pi_2,&& \lambda\sqcoversupset_{{}_{{ \mathfrak{S}}}}\pi_3.\label{useful1aREC}
\end{eqnarray}
The consistency relation $\sigma_2 \asymp \sigma_4$ means that there exists $\epsilon_{24}\in \Theta^{\overline{ \mathfrak{S}}}(\sigma_2)\cap \Theta^{\overline{ \mathfrak{S}}}(\sigma_4)$ and $\epsilon_{24}=\sigma_2\sqcap_{{}_{{\mathfrak{S}}}}\sigma_4$ with
$\sigma_2:=\pi_2\sqcup_{{}_{{ \mathfrak{S}}}}\rho_2$ and 
$\sigma_4:=\pi_3\sqcup_{{}_{{ \mathfrak{S}}}}\rho_1$. We will consider that $\epsilon_{24}\not=\pi_2$, $\epsilon_{24}\not=\rho_2$, $\epsilon_{24}\not=\pi_3$ and $\epsilon_{24}\not=\rho_1$. 
%We have
%\begin{eqnarray}
%\pi_2\sqcap_{{}_{\overline{ \mathfrak{S}}}}\epsilon_{24}\sqcoversubset_{{}_{\overline{ \mathfrak{S}}}} \pi_2,\epsilon_{24},\\
%\rho_2\sqcap_{{}_{\overline{ \mathfrak{S}}}}\epsilon_{24}\sqcoversubset_{{}_{\overline{ \mathfrak{S}}}} \rho_2,\epsilon_{24},\\
%\pi_3\sqcap_{{}_{\overline{ \mathfrak{S}}}}\epsilon_{24}\sqcoversubset_{{}_{\overline{ \mathfrak{S}}}} \pi_3,\epsilon_{24},\\
%\rho_1\sqcap_{{}_{\overline{ \mathfrak{S}}}}\epsilon_{24}\sqcoversubset_{{}_{\overline{ \mathfrak{S}}}} \rho_1,\epsilon_{24}.
%\end{eqnarray}
%Analogously,  the consistency relation $\sigma_1 \asymp \sigma_3$ means that there exists $\epsilon_{13}\in \Theta^{\overline{ \mathfrak{S}}}(\sigma_1)\cap \Theta^{\overline{ \mathfrak{S}}}(\sigma_3)$ and $\epsilon_{13}=\sigma_1\sqcap_{{}_{{\mathfrak{S}}}}\sigma_3$ with $\sigma_1:=\pi_2\sqcup_{{}_{{ \mathfrak{S}}}}\rho_1$ and $\sigma_3:=\pi_3\sqcup_{{}_{{ \mathfrak{S}}}}\rho_2$, with obviously $\epsilon_{13}\not=\pi_2$, $\epsilon_{13}\not=\rho_2$, $\epsilon_{13}\not=\pi_3$ and $\epsilon_{13}\not=\rho_1$ and
%\begin{eqnarray}
%\pi_2\sqcap_{{}_{\overline{ \mathfrak{S}}}}\epsilon_{13}\sqcoversubset_{{}_{\overline{ \mathfrak{S}}}} \pi_2,\epsilon_{13},\\
%\rho_2\sqcap_{{}_{\overline{ \mathfrak{S}}}}\epsilon_{13}\sqcoversubset_{{}_{\overline{ \mathfrak{S}}}} \rho_2,\epsilon_{13},\\
%\pi_3\sqcap_{{}_{\overline{ \mathfrak{S}}}}\epsilon_{13}\sqcoversubset_{{}_{\overline{ \mathfrak{S}}}} \pi_3,\epsilon_{13},\\
%\rho_1\sqcap_{{}_{\overline{ \mathfrak{S}}}}\epsilon_{13}\sqcoversubset_{{}_{\overline{ \mathfrak{S}}}} \rho_1,\epsilon_{13}.
%\end{eqnarray}
From $\epsilon_{24}=\sigma_2\sqcap_{{}_{{\mathfrak{S}}}}\sigma_4$ and $\sigma_2\sqcoversupset_{{}_{{ \mathfrak{S}}}}\pi_2$ and $\sigma_4\sqcoversupset_{{}_{{ \mathfrak{S}}}}\pi_3$, we deduce 
\begin{eqnarray}
\epsilon_{24}\sqsupseteq_{{}_{{ \mathfrak{S}}}}(\pi_2\sqcap_{{}_{\overline{ \mathfrak{S}}}}\pi_3).
\end{eqnarray}
Moreover, from (\ref{thirdcoveringpropertySbariii}) (if $\lambda\in  ({\widecheck{\mathfrak{S}}}\smallsetminus \overline{\mathfrak{S}})$) or from (\ref{secondcoveringpropertySbar}) (if $\lambda\in \overline{\mathfrak{S}}{}^{{}^{pure}}$)
, we deduce
\begin{eqnarray}
(\pi_2\sqcap_{{}_{\overline{ \mathfrak{S}}}}\pi_3) \sqcoversubset_{{}_{\overline{ \mathfrak{S}}}} \pi_2,\pi_3
\end{eqnarray}
and then, using $\epsilon_{24}\not=\pi_2$ and $\epsilon_{24}\not=\pi_3$
\begin{eqnarray}
(\epsilon_{24}\sqcap_{{}_{\overline{ \mathfrak{S}}}}\pi_2)=(\epsilon_{24}\sqcap_{{}_{\overline{ \mathfrak{S}}}}\pi_3)=(\pi_2\sqcap_{{}_{\overline{ \mathfrak{S}}}}\pi_3)
\end{eqnarray}
 From $\pi_2\sqcap_{{}_{\overline{ \mathfrak{S}}}}\epsilon_{24}=\pi_3\sqcap_{{}_{\overline{ \mathfrak{S}}}}\epsilon_{24}$, we deduce, using (\ref{thirdcoveringpropertySbariii}), that $\epsilon_{24}=(\pi_2\sqcap_{{}_{\overline{ \mathfrak{S}}}}\epsilon_{24})\sqcup_{{}_{\overline{ \mathfrak{S}}}}(\rho_2\sqcap_{{}_{\overline{ \mathfrak{S}}}}\epsilon_{24})=(\pi_3\sqcap_{{}_{\overline{ \mathfrak{S}}}}\epsilon_{24})\sqcup_{{}_{\overline{ \mathfrak{S}}}}(\rho_2\sqcap_{{}_{\overline{ \mathfrak{S}}}}\epsilon_{24})$. As a consequence,  $\epsilon_{24}$ is an element of $ \Theta^{\overline{ \mathfrak{S}}}(\sigma_2)\cap \Theta^{\overline{ \mathfrak{S}}}(\sigma_3)$. However, we have already $\rho_2\in \Theta^{\overline{ \mathfrak{S}}}(\sigma_2)\cap \Theta^{\overline{ \mathfrak{S}}}(\sigma_3)$ and $\epsilon_{24}\not=\rho_2$. Using (\ref{Rek1}) we then deduce that $\sigma_2=\sigma_3$.  This case has been excluded by assumption. \\
As a conclusion, this configuration is excluded.\\

Let us now consider that $\sigma_1,\sigma_2,\sigma_3\in ({\check{\mathfrak{S}}}\smallsetminus \overline{\mathfrak{S}})$ and $\sigma_4\in {\overline{\mathfrak{S}}}{}^{{}^{pure}}$. 
Using properties (\ref{Rek1}) and $\sigma_1\asymp\sigma_2, \sigma_2\asymp\sigma_3, \sigma_3\asymp\sigma_4, \sigma_1\asymp\sigma_4$ and (\ref{thirdcoveringpropertySbarv})(\ref{thirdcoveringpropertySbariii}), we note that we can always choose $\pi_i,\rho_i\in \overline{\mathfrak{S}}$ for $i=1,2,3$ such that
\begin{eqnarray}
\forall i=1,2,3&&\pi_i,\rho_i\in \Theta^{\overline{\mathfrak{S}}}(\sigma_i),\;\;\sigma_i=\pi_i\sqcup_{{}_{\overline{ \mathfrak{S}}}} \rho_i \\
&&\pi_1=\pi_2,\;\;\rho_2=\rho_3,\;\;\sigma_4\sqcoversupset_{{}_{\overline{ \mathfrak{S}}}}\pi_3,\;\; \sigma_4\sqcoversupset_{{}_{\overline{ \mathfrak{S}}}}\rho_1.
\end{eqnarray}
Let us now assume that there exists $\lambda\in {\mathfrak{G}}^{\check{\mathfrak{S}}}$ such that ${\mathfrak{r}}^U(\lambda,\sigma_1,\sigma_2)$ and ${\mathfrak{r}}^U(\lambda,\sigma_3,\sigma_4)$. We have then
\begin{eqnarray}
\lambda\sqcoversupset_{{}_{{ \mathfrak{S}}}} \pi_2,&& \lambda\sqcoversupset_{{}_{{ \mathfrak{S}}}}\pi_3.
\end{eqnarray}
The consistency relation $\sigma_2 \asymp \sigma_4$ means that there exists $\epsilon_2\in \Theta^{\overline{ \mathfrak{S}}}(\sigma_2)$ such that $\epsilon_2\sqcoversubset_{{}_{\overline{\mathfrak{S}}}}\sigma_4$.
Analogously,  the consistency relation $\sigma_1 \asymp \sigma_3$ means that there exists $\epsilon_1\in \Theta^{\overline{ \mathfrak{S}}}(\sigma_1)\cap \Theta^{\overline{ \mathfrak{S}}}(\sigma_3)$ such that $\epsilon_1=\sigma_1\sqcap_{{}_{\overline{\mathfrak{S}}}}\sigma_3$.\\
Let us now choose $\chi:=\sigma_1$ and $\chi:=\sigma_2$ and $\chi:=\sigma_3$ with $\kappa:=\sigma_4$ in (\ref{fundamentalformula}). We obtain
\begin{eqnarray}
&&\rho_1 \sqsupseteq_{{}_{\overline{\mathfrak{S}}}} (\sigma_4\sqcap_{{}_{\overline{ \mathfrak{S}}}}\pi_2),(\sigma_4\sqcap_{{}_{\overline{ \mathfrak{S}}}}\epsilon_1)\\
&&\epsilon_2 \sqsupseteq_{{}_{\overline{\mathfrak{S}}}} (\sigma_4\sqcap_{{}_{\overline{ \mathfrak{S}}}}\pi_2),(\sigma_4\sqcap_{{}_{\overline{ \mathfrak{S}}}}\rho_2)\\
&&\pi_3 \sqsupseteq_{{}_{\overline{\mathfrak{S}}}} (\sigma_4\sqcap_{{}_{\overline{ \mathfrak{S}}}}\rho_2),(\sigma_4\sqcap_{{}_{\overline{ \mathfrak{S}}}}\epsilon_1).
\end{eqnarray}
From $\rho_1 \sqsupseteq_{{}_{\overline{\mathfrak{S}}}} (\sigma_4\sqcap_{{}_{\overline{ \mathfrak{S}}}}\pi_2)$ and $\sigma_4\sqsupseteq_{{}_{\overline{ \mathfrak{S}}}}\rho_1$ we deduce $(\rho_1\sqcap_{{}_{\overline{ \mathfrak{S}}}}\pi_2)=(\sigma_4\sqcap_{{}_{\overline{ \mathfrak{S}}}}\pi_2)$. From $\epsilon_2 \sqsupseteq_{{}_{\overline{\mathfrak{S}}}} (\sigma_4\sqcap_{{}_{\overline{ \mathfrak{S}}}}\pi_2)$ and $\epsilon_2\sqcoversubset_{{}_{\overline{\mathfrak{S}}}}\sigma_4$, we deduce $(\epsilon_2\sqcap_{{}_{\overline{ \mathfrak{S}}}}\pi_2)=(\sigma_4\sqcap_{{}_{\overline{ \mathfrak{S}}}}\pi_2)$.\\
From $\rho_1 \sqsupseteq_{{}_{\overline{\mathfrak{S}}}} (\sigma_4\sqcap_{{}_{\overline{ \mathfrak{S}}}}\epsilon_1)$ and $\sigma_4\sqsupseteq_{{}_{\overline{ \mathfrak{S}}}}\rho_1$ we deduce $(\rho_1\sqcap_{{}_{\overline{ \mathfrak{S}}}}\epsilon_1)=(\sigma_4\sqcap_{{}_{\overline{ \mathfrak{S}}}}\epsilon_1)$. From $\pi_3 \sqsupseteq_{{}_{\overline{\mathfrak{S}}}} (\sigma_4\sqcap_{{}_{\overline{ \mathfrak{S}}}}\epsilon_1)$ and $\pi_3\sqcoversubset_{{}_{\overline{\mathfrak{S}}}}\sigma_4$, we deduce $(\pi_3\sqcap_{{}_{\overline{ \mathfrak{S}}}}\epsilon_1)=(\sigma_4\sqcap_{{}_{\overline{ \mathfrak{S}}}}\epsilon_1)$.\\
From $\epsilon_2 \sqsupseteq_{{}_{\overline{\mathfrak{S}}}} (\sigma_4\sqcap_{{}_{\overline{ \mathfrak{S}}}}\rho_2)$ and $\sigma_4\sqsupseteq_{{}_{\overline{ \mathfrak{S}}}}\epsilon_2$ we deduce $(\rho_2\sqcap_{{}_{\overline{ \mathfrak{S}}}}\epsilon_2)=(\sigma_4\sqcap_{{}_{\overline{ \mathfrak{S}}}}\rho_2)$. From $\pi_3 \sqsupseteq_{{}_{\overline{\mathfrak{S}}}} (\sigma_4\sqcap_{{}_{\overline{ \mathfrak{S}}}}\rho_2)$ and $\pi_3\sqcoversubset_{{}_{\overline{\mathfrak{S}}}}\sigma_4$, we deduce $(\pi_3\sqcap_{{}_{\overline{ \mathfrak{S}}}}\rho_2)=(\sigma_4\sqcap_{{}_{\overline{ \mathfrak{S}}}}\rho_2)$.\\
Let us summarize
\begin{eqnarray}
&&(\rho_1\sqcap_{{}_{\overline{ \mathfrak{S}}}}\pi_2)=(\epsilon_2\sqcap_{{}_{\overline{ \mathfrak{S}}}}\pi_2)\\
&&(\rho_1\sqcap_{{}_{\overline{ \mathfrak{S}}}}\epsilon_1)=(\pi_3\sqcap_{{}_{\overline{ \mathfrak{S}}}}\epsilon_1)\\
&&(\rho_2\sqcap_{{}_{\overline{ \mathfrak{S}}}}\epsilon_2)=(\pi_3\sqcap_{{}_{\overline{ \mathfrak{S}}}}\rho_2).
\end{eqnarray}
Using $(\rho_1\sqcap_{{}_{\overline{ \mathfrak{S}}}}\epsilon_1)\sqcoversubset_{{}_{\overline{ \mathfrak{S}}}}\rho_1$ inherited from (\ref{thirdcoveringpropertySbariii}), and $(\rho_1\sqcap_{{}_{\overline{ \mathfrak{S}}}}\epsilon_1)=(\pi_3\sqcap_{{}_{\overline{ \mathfrak{S}}}}\epsilon_1)$, we deduce 
\begin{eqnarray}
(\rho_1\sqcap_{{}_{\overline{ \mathfrak{S}}}}\epsilon_1)=(\pi_3\sqcap_{{}_{\overline{ \mathfrak{S}}}}\epsilon_1)=(\pi_3\sqcap_{{}_{\overline{ \mathfrak{S}}}}\rho_1)
\end{eqnarray}
%Using $(\rho_2\sqcap_{{}_{\overline{ \mathfrak{S}}}}\epsilon_2)\sqcoversubset_{{}_{\overline{ \mathfrak{S}}}}\epsilon_2$ inherited from (\ref{thirdcoveringpropertySbariii}), and $(\rho_2\sqcap_{{}_{\overline{ \mathfrak{S}}}}\epsilon_2)=(\pi_3\sqcap_{{}_{\overline{ \mathfrak{S}}}}\rho_2)$, we deduce 
%\begin{eqnarray}
%(\rho_2\sqcap_{{}_{\overline{ \mathfrak{S}}}}\epsilon_2)=(\pi_3\sqcap_{{}_{\overline{ \mathfrak{S}}}}\epsilon_2)=(\pi_3\sqcap_{{}_{\overline{ \mathfrak{S}}}}\rho_2)
%\end{eqnarray}
From $\epsilon_1=\sigma_1\sqcap_{{}_{\overline{\mathfrak{S}}}}\sigma_3$ we know that 
\begin{eqnarray}
\epsilon_1\sqsupseteq_{{}_{\overline{\mathfrak{S}}}} (\pi_2\sqcap_{{}_{\overline{\mathfrak{S}}}}\pi_3)
\end{eqnarray}
On another part, from $\lambda= \pi_2\sqcup_{{}_{\overline{ \mathfrak{S}}}}\pi_3$ and using (\ref{thirdcoveringpropertySbariii}) (if $\lambda\in  ({\widecheck{\mathfrak{S}}}\smallsetminus \overline{\mathfrak{S}})$) or using (\ref{secondcoveringpropertySbar}) (if $\lambda\in \overline{\mathfrak{S}}{}^{{}^{pure}}$)
, we deduce
\begin{eqnarray}
(\pi_2\sqcap_{{}_{\overline{\mathfrak{S}}}}\pi_3) \sqcoversubset_{{}_{\overline{\mathfrak{S}}}}\pi_2,\pi_3.
\end{eqnarray}
and then $(\epsilon_1\sqcap_{{}_{\overline{\mathfrak{S}}}}\pi_3)=(\pi_2\sqcap_{{}_{\overline{\mathfrak{S}}}}\pi_3)$. Let us summarize
\begin{eqnarray}
(\rho_1\sqcap_{{}_{\overline{ \mathfrak{S}}}}\epsilon_1)=(\pi_3\sqcap_{{}_{\overline{ \mathfrak{S}}}}\rho_1)=(\epsilon_1\sqcap_{{}_{\overline{\mathfrak{S}}}}\pi_3)=(\pi_2\sqcap_{{}_{\overline{\mathfrak{S}}}}\pi_3)\sqcoversubset_{{}_{\overline{\mathfrak{S}}}}\pi_2,\pi_3.
\end{eqnarray}
Now, from $(\rho_1\sqcap_{{}_{\overline{ \mathfrak{S}}}}\epsilon_1)\sqcoversubset_{{}_{\overline{\mathfrak{S}}}}\pi_2$ we have immediately 
\begin{eqnarray}
(\rho_1\sqcap_{{}_{\overline{ \mathfrak{S}}}}\epsilon_1)=(\rho_1\sqcap_{{}_{\overline{ \mathfrak{S}}}}\pi_2).
\end{eqnarray}
Now, we can summarize our results. From $(\rho_1\sqcap_{{}_{\overline{ \mathfrak{S}}}}\pi_2)=(\epsilon_2\sqcap_{{}_{\overline{ \mathfrak{S}}}}\pi_2)$, $(\rho_1\sqcap_{{}_{\overline{ \mathfrak{S}}}}\epsilon_1)=(\pi_3\sqcap_{{}_{\overline{ \mathfrak{S}}}}\epsilon_1)$ and $(\rho_1\sqcap_{{}_{\overline{ \mathfrak{S}}}}\epsilon_1)=(\rho_1\sqcap_{{}_{\overline{ \mathfrak{S}}}}\pi_2)$, we obtain $(\epsilon_2\sqcap_{{}_{\overline{ \mathfrak{S}}}}\pi_2)=(\pi_3\sqcap_{{}_{\overline{ \mathfrak{S}}}}\epsilon_1)$. Hence, we have obtained
\begin{eqnarray}
(\epsilon_2\sqcap_{{}_{\overline{ \mathfrak{S}}}}\pi_2)=(\pi_3\sqcap_{{}_{\overline{ \mathfrak{S}}}}\epsilon_1),&&
(\epsilon_2\sqcap_{{}_{\overline{ \mathfrak{S}}}}\rho_2)=(\pi_3\sqcap_{{}_{\overline{ \mathfrak{S}}}}\rho_2)
\end{eqnarray}
and then, in particular,
\begin{eqnarray}
(\epsilon_2\sqcap_{{}_{\overline{ \mathfrak{S}}}}\pi_2)\sqcup_{{}_{{ \mathfrak{S}}}}(\epsilon_2\sqcap_{{}_{\overline{ \mathfrak{S}}}}\rho_2) &=& (\pi_3\sqcap_{{}_{\overline{ \mathfrak{S}}}}\epsilon_1)\sqcup_{{}_{{ \mathfrak{S}}}}(\pi_3\sqcap_{{}_{\overline{ \mathfrak{S}}}}\rho_2).
\end{eqnarray}
On another part, we recall that, using properties (\ref{thirdcoveringpropertySbarv}) (\ref{thirdcoveringpropertySbariii}) and $\epsilon_2\in \Theta^{\overline{\mathfrak{S}}}(\pi_2\sqcup_{{}_{\overline{ \mathfrak{S}}}}\rho_2)=\Theta^{\overline{\mathfrak{S}}}(\sigma_2)$ and $\pi_3\in \Theta^{\overline{\mathfrak{S}}}(\epsilon_1\sqcup_{{}_{{ \mathfrak{S}}}}\rho_2)=\Theta^{{\mathfrak{S}}}(\sigma_3)$, we have 
\begin{eqnarray}
\epsilon_2=(\epsilon_2\sqcap_{{}_{\overline{ \mathfrak{S}}}}\pi_2)\sqcup_{{}_{\overline{ \mathfrak{S}}}}(\epsilon_2\sqcap_{{}_{\overline{ \mathfrak{S}}}}\rho_2),&&\pi_3=(\pi_3\sqcap_{{}_{\overline{ \mathfrak{S}}}}\epsilon_1)\sqcup_{{}_{\overline{ \mathfrak{S}}}}(\pi_3\sqcap_{{}_{\overline{ \mathfrak{S}}}}\rho_2)
\end{eqnarray}
We then conclude that
\begin{eqnarray}
(\sigma_2\sqcap_{{}_{\overline{ \mathfrak{S}}}}\sigma_4)=\epsilon_2=\pi_3=(\sigma_3\sqcap_{{}_{\overline{ \mathfrak{S}}}}\sigma_4)
\end{eqnarray}
but this degenerate case has been excluded by assumption.  As a conclusion, this configuration is excluded.\\

Let us now consider that $\sigma_1,\sigma_2\in ({\check{\mathfrak{S}}}\smallsetminus \overline{\mathfrak{S}})$ and $\sigma_3,\sigma_4\in {\overline{\mathfrak{S}}}{}^{{}^{pure}}$.   
Using properties (\ref{Rek1}) and $\sigma_1\asymp\sigma_2, \sigma_2\asymp\sigma_3, \sigma_3\asymp\sigma_4, \sigma_1\asymp\sigma_4$ and (\ref{thirdcoveringpropertySbarv})(\ref{thirdcoveringpropertySbariii}), we note that we can always choose $\pi_i,\rho_i\in \overline{\mathfrak{S}}$ for $i=1,2$ such that
\begin{eqnarray}
\forall i=1,2&&\pi_i,\rho_i\in \Theta^{\overline{\mathfrak{S}}}(\sigma_i),\;\;\sigma_i=\pi_i\sqcup_{{}_{\overline{ \mathfrak{S}}}} \rho_i \\
&&\pi_1=\pi_2,\;\;\sigma_3\sqcoversupset_{{}_{\overline{ \mathfrak{S}}}}\rho_2,\;\; \sigma_4\sqcoversupset_{{}_{\overline{ \mathfrak{S}}}}\rho_1%,\;\; \sigma_3,\sigma_4\sqcoversupset_{{}_{\overline{ \mathfrak{S}}}}(\sigma_3\sqcap_{{}_{\overline{ \mathfrak{S}}}}\sigma_4)
.
\end{eqnarray}
Let us now assume that there exists $\lambda\in {\mathfrak{G}}^{\check{\mathfrak{S}}}$ such that ${\mathfrak{r}}^U(\lambda,\sigma_1,\sigma_2)$ and ${\mathfrak{r}}^U(\lambda,\sigma_3,\sigma_4)$. We have then
\begin{eqnarray}
\lambda\sqcoversupset_{{}_{{ \mathfrak{S}}}} \pi_2, && \lambda\sqcoversupset_{{}_{\overline{ \mathfrak{S}}}}(\sigma_3\sqcap_{{}_{{ \mathfrak{S}}}}\sigma_4).
\end{eqnarray}
The consistency relation $\sigma_2 \asymp \sigma_4$ means that there exists $\epsilon_2\in \Theta^{\overline{ \mathfrak{S}}}(\sigma_2)$ such that $\epsilon_2\sqcoversubset_{{}_{\overline{\mathfrak{S}}}}\sigma_4$.
Analogously,  the consistency relation $\sigma_1 \asymp \sigma_3$ means that there exists $\epsilon_1\in \Theta^{\overline{ \mathfrak{S}}}(\sigma_1)$ such that $\epsilon_1\sqcoversubset_{{}_{\overline{\mathfrak{S}}}}\sigma_3$.\\
Once again, we insist on the fact that $\epsilon_1\not=\pi_1$ (resp. $\epsilon_1\not=\rho_1$) because this would mean 
$\sigma_1\sqcap_{{}_{\overline{ \mathfrak{S}}}}\sigma_3=\epsilon_1=\pi_1=\sigma_1\sqcap_{{}_{\overline{ \mathfrak{S}}}}\sigma_2$ (resp. $\sigma_1\sqcap_{{}_{\overline{ \mathfrak{S}}}}\sigma_3=\epsilon_1=\rho_1=\sigma_1\sqcap_{{}_{\overline{ \mathfrak{S}}}}\sigma_4$). These degenerate cases have been excluded by assumption. We will also assume obviously $\epsilon_2\not=\pi_2$ and $\epsilon_2\not=\rho_2$.\\
Let us now choose $\chi:=\sigma_2$ and $\chi:=\sigma_1$ with $\kappa:=\sigma_3$ in (\ref{fundamentalformula}), and simultaneously $\chi:=\sigma_1$ and $\chi:=\sigma_2$ with $\kappa:=\sigma_4$ in (\ref{fundamentalformula}).  We obtain
\begin{eqnarray}
&&\rho_2 \sqsupseteq_{{}_{\overline{\mathfrak{S}}}} (\sigma_3\sqcap_{{}_{\overline{ \mathfrak{S}}}}\pi_2),(\sigma_3\sqcap_{{}_{\overline{ \mathfrak{S}}}}\epsilon_2)\\
&&\epsilon_1 \sqsupseteq_{{}_{\overline{\mathfrak{S}}}} (\sigma_3\sqcap_{{}_{\overline{ \mathfrak{S}}}}\pi_2),(\sigma_3\sqcap_{{}_{\overline{ \mathfrak{S}}}}\rho_1)\\
&&\rho_1 \sqsupseteq_{{}_{\overline{\mathfrak{S}}}} (\sigma_4\sqcap_{{}_{\overline{ \mathfrak{S}}}}\pi_2),(\sigma_4\sqcap_{{}_{\overline{ \mathfrak{S}}}}\epsilon_1)\\
&&\epsilon_2 \sqsupseteq_{{}_{\overline{\mathfrak{S}}}} (\sigma_4\sqcap_{{}_{\overline{ \mathfrak{S}}}}\pi_2),(\sigma_4\sqcap_{{}_{\overline{ \mathfrak{S}}}}\rho_2)
\end{eqnarray}
From $\epsilon_1 \sqsupseteq_{{}_{\overline{\mathfrak{S}}}} (\sigma_3\sqcap_{{}_{\overline{ \mathfrak{S}}}}\pi_2)$ and $\epsilon_1\sqcoversubset_{{}_{\overline{\mathfrak{S}}}}\sigma_3$, we deduce $(\sigma_3\sqcap_{{}_{\overline{ \mathfrak{S}}}}\pi_2)=(\epsilon_1\sqcap_{{}_{\overline{ \mathfrak{S}}}}\pi_2)$. \\
From $\rho_1 \sqsupseteq_{{}_{\overline{\mathfrak{S}}}} (\sigma_4\sqcap_{{}_{\overline{ \mathfrak{S}}}}\pi_2)$ and $\sigma_4\sqsupseteq_{{}_{\overline{ \mathfrak{S}}}}\rho_1$, we deduce $(\sigma_4\sqcap_{{}_{\overline{ \mathfrak{S}}}}\pi_2)=(\rho_1\sqcap_{{}_{\overline{ \mathfrak{S}}}}\pi_2)$.\\
On another part, from (\ref{thirdcoveringpropertySbariii}) (if $\lambda\in  ({\widecheck{\mathfrak{S}}}\smallsetminus \overline{\mathfrak{S}})$) or from (\ref{secondcoveringpropertySbar}) (if $\lambda\in \overline{\mathfrak{S}}{}^{{}^{pure}}$)
, we deduce
\begin{eqnarray}
(\sigma_3\sqcap_{{}_{\overline{ \mathfrak{S}}}}\pi_2)\sqcap_{{}_{\overline{ \mathfrak{S}}}}(\sigma_4\sqcap_{{}_{\overline{ \mathfrak{S}}}}\pi_2) \sqcoversubset_{{}_{\overline{ \mathfrak{S}}}} \pi_2.
\end{eqnarray}
As long as $\sigma_2\not=\sigma_3$, we have $\sigma_3\not\sqsupseteq_{{}_{\overline{ \mathfrak{S}}}}\pi_2$. Analogously, we have $\sigma_4\not\sqsupseteq_{{}_{\overline{ \mathfrak{S}}}}\pi_2$. Hence,
\begin{eqnarray}
(\sigma_3\sqcap_{{}_{\overline{ \mathfrak{S}}}}\pi_2)=(\sigma_4\sqcap_{{}_{\overline{ \mathfrak{S}}}}\pi_2)=
(\sigma_3\sqcap_{{}_{\overline{ \mathfrak{S}}}}\pi_2)\sqcap_{{}_{\overline{ \mathfrak{S}}}}(\sigma_4\sqcap_{{}_{\overline{ \mathfrak{S}}}}\pi_2) \sqcoversubset_{{}_{\overline{ \mathfrak{S}}}} \pi_2.
\end{eqnarray}
As a result, we obtain
\begin{eqnarray}
(\epsilon_1\sqcap_{{}_{\overline{ \mathfrak{S}}}}\pi_2) = 
(\rho_1\sqcap_{{}_{\overline{ \mathfrak{S}}}}\pi_2).
\end{eqnarray}
Hence, we have obtained a contradiction with the property (\ref{thirdcoveringpropertySbariv}).\\
As a conclusion, this configuration is excluded.\\

Let us now consider that $\sigma_1,\sigma_4\in ({\check{\mathfrak{S}}}\smallsetminus \overline{\mathfrak{S}})$ and $\sigma_2,\sigma_3\in {\overline{\mathfrak{S}}}{}^{{}^{pure}}$.   
Using properties (\ref{Rek1}) and $\sigma_1\asymp\sigma_2, \sigma_2\asymp\sigma_3, \sigma_3\asymp\sigma_4, \sigma_1\asymp\sigma_4$ and (\ref{thirdcoveringpropertySbarv})(\ref{thirdcoveringpropertySbariii}), we note that we can always choose $\pi_i,\rho_i\in \overline{\mathfrak{S}}$ for $i=1,2$ such that
\begin{eqnarray}
\forall i=1,4&&\pi_i,\rho_i\in \Theta^{\overline{\mathfrak{S}}}(\sigma_i),\;\;\sigma_i=\pi_i\sqcup_{{}_{\overline{ \mathfrak{S}}}} \rho_i\\
&&\rho_1=\rho_4,\;\;\sigma_2\sqcoversupset_{{}_{\overline{ \mathfrak{S}}}}\pi_1,\;\; \sigma_3\sqcoversupset_{{}_{\overline{ \mathfrak{S}}}}\pi_4%,\;\; \sigma_3,\sigma_2\sqcoversupset_{{}_{\overline{ \mathfrak{S}}}}(\sigma_3\sqcap_{{}_{\overline{ \mathfrak{S}}}}\sigma_2)
.
\end{eqnarray}
Let us now assume that there exists $\lambda\in {\mathfrak{G}}^{\check{\mathfrak{S}}}$ such that ${\mathfrak{r}}^U(\lambda,\sigma_1,\sigma_2)$ and ${\mathfrak{r}}^U(\lambda,\sigma_3,\sigma_4)$. We have then
\begin{eqnarray}
\lambda\sqcoversupset_{{}_{{ \mathfrak{S}}}} \pi_1&& \lambda\sqcoversupset_{{}_{{ \mathfrak{S}}}}\pi_4.
\end{eqnarray}
The consistency relation $\sigma_2 \asymp \sigma_4$ means that there exists $\epsilon_4\in \Theta^{\overline{ \mathfrak{S}}}(\sigma_4)$ such that $\epsilon_4\sqcoversubset_{{}_{\overline{\mathfrak{S}}}}\sigma_2$.
Analogously,  the consistency relation $\sigma_1 \asymp \sigma_3$ means that there exists $\epsilon_1\in \Theta^{\overline{ \mathfrak{S}}}(\sigma_1)$ such that $\epsilon_1\sqcoversubset_{{}_{\overline{\mathfrak{S}}}}\sigma_3$.\\
Once again, we insist on the fact that $\epsilon_1\not=\pi_1$ (resp. $\epsilon_1\not=\rho_1$) because this would mean 
$\sigma_1\sqcap_{{}_{\overline{ \mathfrak{S}}}}\sigma_3=\epsilon_1=\pi_1=\sigma_1\sqcap_{{}_{\overline{ \mathfrak{S}}}}\sigma_2$ (resp. $\sigma_1\sqcap_{{}_{\overline{ \mathfrak{S}}}}\sigma_3=\epsilon_1=\rho_1=\sigma_1\sqcap_{{}_{\overline{ \mathfrak{S}}}}\sigma_4$). These degenerate cases have been excluded by assumption. We will also assume obviously $\epsilon_4\not=\pi_4$ and $\epsilon_4\not=\rho_1$.\\
Let us now choose $\chi:=\sigma_1$ and $\chi:=\sigma_4$ with $\kappa:=\sigma_2$ in (\ref{fundamentalformula}), and simultaneously $\chi:=\sigma_1$ and $\chi:=\sigma_4$ with $\kappa:=\sigma_3$ in (\ref{fundamentalformula}).  We obtain
\begin{eqnarray}
&&\pi_1 \sqsupseteq_{{}_{\overline{\mathfrak{S}}}} (\sigma_2\sqcap_{{}_{\overline{ \mathfrak{S}}}}\rho_1),(\sigma_2\sqcap_{{}_{\overline{ \mathfrak{S}}}}\epsilon_1)\\
&&\epsilon_4 \sqsupseteq_{{}_{\overline{\mathfrak{S}}}} (\sigma_2\sqcap_{{}_{\overline{ \mathfrak{S}}}}\pi_4),(\sigma_2\sqcap_{{}_{\overline{ \mathfrak{S}}}}\rho_1)\\
&&\epsilon_1 \sqsupseteq_{{}_{\overline{\mathfrak{S}}}} (\sigma_3\sqcap_{{}_{\overline{ \mathfrak{S}}}}\pi_1),(\sigma_3\sqcap_{{}_{\overline{ \mathfrak{S}}}}\rho_1)\\
&&\pi_4 \sqsupseteq_{{}_{\overline{\mathfrak{S}}}} (\sigma_3\sqcap_{{}_{\overline{ \mathfrak{S}}}}\rho_1),(\sigma_3\sqcap_{{}_{\overline{ \mathfrak{S}}}}\epsilon_4).
\end{eqnarray}
From $\pi_1 \sqsupseteq_{{}_{\overline{\mathfrak{S}}}} (\sigma_2\sqcap_{{}_{\overline{ \mathfrak{S}}}}\rho_1)$ and $\pi_1\sqcoversubset_{{}_{\overline{\mathfrak{S}}}}\sigma_2$, we deduce $(\sigma_2\sqcap_{{}_{\overline{ \mathfrak{S}}}}\rho_1)=(\pi_1\sqcap_{{}_{\overline{ \mathfrak{S}}}}\rho_1)$. \\
%%%From $\epsilon_4 \sqsupseteq_{{}_{\overline{\mathfrak{S}}}} (\sigma_2\sqcap_{{}_{\overline{ \mathfrak{S}}}}\pi_4)$ and $\pi_1\sqcoversubset_{{}_{\overline{\mathfrak{S}}}}\sigma_2$, we deduce $\epsilon_4\sqsupseteq_{{}_{\overline{\mathfrak{S}}}}(\pi_1\sqcap_{{}_{\overline{ \mathfrak{S}}}}\pi_4)$.\\
%%%From $\epsilon_1 \sqsupseteq_{{}_{\overline{\mathfrak{S}}}} (\sigma_3\sqcap_{{}_{\overline{ \mathfrak{S}}}}\pi_1)$ and $\pi_4\sqcoversubset_{{}_{\overline{\mathfrak{S}}}}\sigma_3$, we deduce $\epsilon_1\sqsupseteq_{{}_{\overline{\mathfrak{S}}}}(\pi_1\sqcap_{{}_{\overline{ \mathfrak{S}}}}\pi_4)$.\\
From $\pi_4 \sqsupseteq_{{}_{\overline{\mathfrak{S}}}} (\sigma_3\sqcap_{{}_{\overline{ \mathfrak{S}}}}\rho_1)$ and $\pi_4\sqcoversubset_{{}_{\overline{\mathfrak{S}}}}\sigma_3$, we deduce $(\sigma_3\sqcap_{{}_{\overline{ \mathfrak{S}}}}\rho_1)=(\pi_4\sqcap_{{}_{\overline{ \mathfrak{S}}}}\rho_1)$.\\
We have then obtained
\begin{eqnarray}
\epsilon_4 \sqsupseteq_{{}_{\overline{\mathfrak{S}}}}(\pi_1\sqcap_{{}_{\overline{ \mathfrak{S}}}}\rho_1),&&
\epsilon_1 \sqsupseteq_{{}_{\overline{\mathfrak{S}}}}(\pi_4\sqcap_{{}_{\overline{ \mathfrak{S}}}}\rho_1).
%%%\\
%%%\epsilon_4\sqsupseteq_{{}_{\overline{\mathfrak{S}}}}(\pi_1\sqcap_{{}_{\overline{ \mathfrak{S}}}}\pi_4),&&\epsilon_1\sqsupseteq_{{}_{\overline{\mathfrak{S}}}}(\pi_1\sqcap_{{}_{\overline{ \mathfrak{S}}}}\pi_4).
\end{eqnarray}
From
\begin{eqnarray}
(\pi_1\sqcap_{{}_{\overline{ \mathfrak{S}}}}\rho_1) \sqcoversubset_{{}_{\overline{ \mathfrak{S}}}} \pi_1,\rho_1,&&(\pi_4\sqcap_{{}_{\overline{ \mathfrak{S}}}}\rho_1) \sqcoversubset_{{}_{\overline{ \mathfrak{S}}}} \pi_4,\rho_1
\end{eqnarray}
we deduce
\begin{eqnarray}
(\rho_1\sqcap_{{}_{\overline{ \mathfrak{S}}}}\epsilon_4) =(\pi_1\sqcap_{{}_{\overline{ \mathfrak{S}}}}\rho_1)\sqcoversubset_{{}_{\overline{ \mathfrak{S}}}} \rho_1,&&
(\rho_1\sqcap_{{}_{\overline{ \mathfrak{S}}}}\epsilon_1)=(\pi_4\sqcap_{{}_{\overline{ \mathfrak{S}}}}\rho_1)\sqcoversubset_{{}_{\overline{ \mathfrak{S}}}} \rho_1.
\end{eqnarray}
On another part, from $\lambda\sqcoversupset_{{}_{{ \mathfrak{S}}}} \pi_1,\pi_4$, we know that $\lambda\not\sqsupseteq_{{}_{{ \mathfrak{S}}}}\rho_1$ because, using (\ref{thirdcoveringpropertySbarv}), we would deduce that $\lambda=(\rho_1\sqcup_{{}_{{ \mathfrak{S}}}} \pi_1)=\sigma_1$ and $\lambda=(\rho_1\sqcup_{{}_{{ \mathfrak{S}}}} \pi_4)=\sigma_4$ which has been excluded. Then, we deduce, using $(\pi_1\sqcap_{{}_{\overline{ \mathfrak{S}}}}\rho_1)\sqcoversubset_{{}_{{ \mathfrak{S}}}}\rho_1$ and $(\pi_4\sqcap_{{}_{\overline{ \mathfrak{S}}}}\rho_1)\sqcoversubset_{{}_{{ \mathfrak{S}}}}\rho_1$
\begin{eqnarray}
(\pi_1\sqcap_{{}_{\overline{ \mathfrak{S}}}}\rho_1)=(\pi_4\sqcap_{{}_{\overline{ \mathfrak{S}}}}\rho_1)=(\lambda\sqcap_{{}_{\overline{ \mathfrak{S}}}}\rho_1)
\end{eqnarray}
%On another part, from (\ref{thirdcoveringpropertySbariii}) (if $\lambda\in  ({\check{\mathfrak{S}}}\smallsetminus \overline{\mathfrak{S}})$) or from (\ref{secondcoveringpropertySbar}) (if $\lambda\in \overline{\mathfrak{S}}{}^{{}^{pure}}$), we deduce
%\begin{eqnarray}
%\pi_1\sqcap_{{}_{\overline{ \mathfrak{S}}}}\pi_4 \sqcoversubset_{{}_{\overline{ \mathfrak{S}}}} \pi_1,\pi_4.
%\end{eqnarray}
As a result, we obtain
\begin{eqnarray}
(\rho_1\sqcap_{{}_{\overline{ \mathfrak{S}}}}\epsilon_4) =(\pi_4\sqcap_{{}_{\overline{ \mathfrak{S}}}}\rho_1),&&(\rho_1\sqcap_{{}_{\overline{ \mathfrak{S}}}}\epsilon_1)=(\pi_1\sqcap_{{}_{\overline{ \mathfrak{S}}}}\rho_1).
\end{eqnarray}
Hence, we have obtained a contradiction with the property (\ref{thirdcoveringpropertySbariv}).\\
As a conclusion, this configuration is excluded.\\

Let us now consider that $\sigma_1,\sigma_3\in ({\check{\mathfrak{S}}}\smallsetminus \overline{\mathfrak{S}})$ and $\sigma_2,\sigma_4\in {\overline{\mathfrak{S}}}{}^{{}^{pure}}$.   
Using properties (\ref{Rek1}) and $\sigma_1\asymp\sigma_2, \sigma_2\asymp\sigma_3, \sigma_3\asymp\sigma_4, \sigma_1\asymp\sigma_4$ and (\ref{thirdcoveringpropertySbarv}), we note that we can always choose $\pi_i,\rho_i\in \overline{\mathfrak{S}}$ for $i=1,2$ such that
\begin{eqnarray}
\forall i=1,3&&\pi_i,\rho_i\in \Theta^{\overline{\mathfrak{S}}}(\sigma_i),\;\;\sigma_i=\pi_i\sqcup_{{}_{\overline{ \mathfrak{S}}}} \rho_i \\
&&\sigma_4\sqcoversupset_{{}_{\overline{ \mathfrak{S}}}}\rho_1,\;\; \sigma_4\sqcoversupset_{{}_{\overline{ \mathfrak{S}}}}\pi_3\\
&&\sigma_2\sqcoversupset_{{}_{\overline{ \mathfrak{S}}}}\rho_3,\;\; \sigma_2\sqcoversupset_{{}_{\overline{ \mathfrak{S}}}}\pi_1%\\
%&&\sigma_2,\sigma_4\sqcoversupset_{{}_{\overline{ \mathfrak{S}}}}(\sigma_2\sqcap_{{}_{\overline{ \mathfrak{S}}}}\sigma_4)
.
\end{eqnarray}
Let us now assume that there exists $\lambda\in  {\mathfrak{G}}^{\check{\mathfrak{S}}}$   such that ${\mathfrak{r}}^U(\lambda,\sigma_1,\sigma_2)$ and ${\mathfrak{r}}^U(\lambda,\sigma_3,\sigma_4)$. 
We will exclude the trivial case $\pi_1=\sigma_1\sqcap_{{}_{\overline{ \mathfrak{S}}}}\sigma_2= \sigma_3\sqcap_{{}_{\overline{ \mathfrak{S}}}}\sigma_4=\pi_3$, i.e. we will assume $\pi_1\not=\pi_3$.  We will also exclude the trivial cases $\pi_1=\sigma_1\sqcap_{{}_{\overline{ \mathfrak{S}}}}\sigma_2= \sigma_2\sqcap_{{}_{\overline{ \mathfrak{S}}}}\sigma_3=\rho_3$ or $\pi_3=\sigma_3\sqcap_{{}_{\overline{ \mathfrak{S}}}}\sigma_4= \sigma_1\sqcap_{{}_{\overline{ \mathfrak{S}}}}\sigma_4=\rho_1$, i.e. we will then assume $\pi_1\not=\rho_3$ and $\pi_3\not=\rho_1$. \\
We have then
\begin{eqnarray}
&&\lambda\sqcoversupset_{{}_{{ \mathfrak{S}}}} \pi_1,\pi_3.
\end{eqnarray} 
The consistency relation $\sigma_2 \asymp \sigma_4$ contains nothing. 
On another part, the consistency relation $\sigma_1 \asymp \sigma_3$ means that there exists $\epsilon\in \Theta^{\overline{ \mathfrak{S}}}(\sigma_1)\cap\Theta^{\overline{ \mathfrak{S}}}(\sigma_3)$.\\
Let us now choose $\chi:=\sigma_1$ and $\chi:=\sigma_3$ with $\kappa:=\sigma_2$ in (\ref{fundamentalformula}), and simultaneously $\chi:=\sigma_3$ and $\chi:=\sigma_1$ with $\kappa:=\sigma_4$ in (\ref{fundamentalformula}).  
\begin{eqnarray}
&&(\sigma_2\sqcap_{{}_{\overline{ \mathfrak{S}}}}\rho_1),(\sigma_2\sqcap_{{}_{\overline{ \mathfrak{S}}}}\epsilon)\sqsubseteq_{{}_{\overline{ \mathfrak{S}}}} \pi_1\\
&&(\sigma_2\sqcap_{{}_{\overline{ \mathfrak{S}}}}\pi_3),(\sigma_2\sqcap_{{}_{\overline{ \mathfrak{S}}}}\epsilon)\sqsubseteq_{{}_{\overline{ \mathfrak{S}}}} \rho_3\\
&&(\sigma_4\sqcap_{{}_{\overline{ \mathfrak{S}}}}\rho_3),(\sigma_4\sqcap_{{}_{\overline{ \mathfrak{S}}}}\epsilon)\sqsubseteq_{{}_{\overline{ \mathfrak{S}}}} \pi_3\\
&&(\sigma_4\sqcap_{{}_{\overline{ \mathfrak{S}}}}\pi_1),(\sigma_4\sqcap_{{}_{\overline{ \mathfrak{S}}}}\epsilon)\sqsubseteq_{{}_{\overline{ \mathfrak{S}}}} \rho_1
\end{eqnarray}
From $(\sigma_2\sqcap_{{}_{\overline{ \mathfrak{S}}}}\epsilon)\sqsubseteq_{{}_{\overline{ \mathfrak{S}}}} \pi_1$, $(\sigma_2\sqcap_{{}_{\overline{ \mathfrak{S}}}}\epsilon)\sqsubseteq_{{}_{\overline{ \mathfrak{S}}}} \rho_3$ and $\sigma_2\sqcoversupset_{{}_{\overline{ \mathfrak{S}}}}\rho_3$ and $\sigma_2\sqcoversupset_{{}_{\overline{ \mathfrak{S}}}}\pi_1$ (and also $(\sigma_4\sqcap_{{}_{\overline{ \mathfrak{S}}}}\epsilon)\sqsubseteq_{{}_{\overline{ \mathfrak{S}}}} \pi_3$, $(\sigma_4\sqcap_{{}_{\overline{ \mathfrak{S}}}}\epsilon)\sqsubseteq_{{}_{\overline{ \mathfrak{S}}}} \rho_1$ and $\sigma_4\sqcoversupset_{{}_{\overline{ \mathfrak{S}}}}\rho_1$ and $\sigma_4\sqcoversupset_{{}_{\overline{ \mathfrak{S}}}}\pi_3$) we deduce
\begin{eqnarray}
&&(\sigma_2\sqcap_{{}_{\overline{ \mathfrak{S}}}}\epsilon)= (\pi_1\sqcap_{{}_{\overline{ \mathfrak{S}}}}\epsilon)=(\rho_3\sqcap_{{}_{\overline{ \mathfrak{S}}}}\epsilon)%=(\pi_1\sqcap_{{}_{\overline{ \mathfrak{S}}}}\rho_3)\sqcoversubset_{{}_{{ \mathfrak{S}}}}\epsilon,\pi_1,\rho_3 
\\
&&(\sigma_4\sqcap_{{}_{\overline{ \mathfrak{S}}}}\epsilon)= (\pi_3\sqcap_{{}_{\overline{ \mathfrak{S}}}}\epsilon)=(\rho_1\sqcap_{{}_{\overline{ \mathfrak{S}}}}\epsilon)%=(\pi_3\sqcap_{{}_{\overline{ \mathfrak{S}}}}\rho_1)\sqcoversubset_{{}_{{ \mathfrak{S}}}}\epsilon,\pi_3,\rho_1
.
\end{eqnarray}
On another part, from $\lambda\sqcoversupset_{{}_{{ \mathfrak{S}}}} \pi_1,\pi_3$, we know that $\lambda\not\sqsupseteq_{{}_{{ \mathfrak{S}}}}\epsilon$ because, using (\ref{thirdcoveringpropertySbarv}), we would deduce that $\sigma_1=(\epsilon\sqcup_{{}_{{ \mathfrak{S}}}} \pi_1)=(\epsilon\sqcup_{{}_{{ \mathfrak{S}}}} \pi_3)=\sigma_3$ which has been excluded. Then, we deduce, using $(\pi_1\sqcap_{{}_{\overline{ \mathfrak{S}}}}\epsilon)\sqcoversubset_{{}_{{ \mathfrak{S}}}}\epsilon$ and $(\pi_3\sqcap_{{}_{\overline{ \mathfrak{S}}}}\epsilon)\sqcoversubset_{{}_{{ \mathfrak{S}}}}\epsilon$
\begin{eqnarray}
(\pi_1\sqcap_{{}_{\overline{ \mathfrak{S}}}}\epsilon)=(\pi_3\sqcap_{{}_{\overline{ \mathfrak{S}}}}\epsilon)=(\lambda\sqcap_{{}_{\overline{ \mathfrak{S}}}}\epsilon)
\end{eqnarray}
%From $\lambda\sqcoversupset_{{}_{{ \mathfrak{S}}}} \pi_1,\pi_3$ and using  (\ref{thirdcoveringpropertySbariii}) (if $\lambda\in  ({\check{\mathfrak{S}}}\smallsetminus \overline{\mathfrak{S}})$) or  (\ref{secondcoveringpropertySbar}) (if $\lambda\in \overline{\mathfrak{S}}{}^{{}^{pure}}$), we deduce that 
%\begin{eqnarray}
%\pi_1,\pi_3 \sqcoversupset_{{}_{{ \mathfrak{S}}}} (\pi_1\sqcap_{{}_{\overline{ \mathfrak{S}}}}\pi_3).
%\end{eqnarray}
%From $(\pi_1\sqcap_{{}_{\overline{ \mathfrak{S}}}}\epsilon)=(\pi_3\sqcap_{{}_{\overline{ \mathfrak{S}}}}\epsilon)$ and $\pi_1,\pi_3 \sqcoversupset_{{}_{{ \mathfrak{S}}}} (\pi_1\sqcap_{{}_{\overline{ \mathfrak{S}}}}\pi_3)$ we deduce
%\begin{eqnarray}
%(\pi_1\sqcap_{{}_{\overline{ \mathfrak{S}}}}\epsilon)=(\pi_3\sqcap_{{}_{\overline{ \mathfrak{S}}}}\epsilon)=(\pi_1\sqcap_{{}_{\overline{ \mathfrak{S}}}}\pi_3)
%\end{eqnarray}
We obtain finally
\begin{eqnarray}
%(\pi_1\sqcap_{{}_{\overline{ \mathfrak{S}}}}\rho_3)=(\pi_3\sqcap_{{}_{\overline{ \mathfrak{S}}}}\rho_1)=(\pi_1\sqcap_{{}_{\overline{ \mathfrak{S}}}}\pi_3)\sqcoversubset_{{}_{{ \mathfrak{S}}}} \epsilon,\pi_1,\pi_3,\rho_1,\rho_3.
(\pi_1\sqcap_{{}_{\overline{ \mathfrak{S}}}}\epsilon)=(\rho_1\sqcap_{{}_{\overline{ \mathfrak{S}}}}\epsilon),&&(\pi_3\sqcap_{{}_{\overline{ \mathfrak{S}}}}\epsilon)=(\rho_3\sqcap_{{}_{\overline{ \mathfrak{S}}}}\epsilon)
\end{eqnarray}
which contradicts (\ref{thirdcoveringpropertySbariv}).\\
As a conclusion, this configuration is excluded.\\

Let us now consider that $\sigma_1\in ({\check{\mathfrak{S}}}\smallsetminus \overline{\mathfrak{S}})$ and $\sigma_2,\sigma_3,\sigma_4\in {\overline{\mathfrak{S}}}{}^{{}^{pure}}$.   
Using properties (\ref{Rek1}) and $\sigma_1\asymp\sigma_2, \sigma_2\asymp\sigma_3, \sigma_3\asymp\sigma_4, \sigma_1\asymp\sigma_4$ and (\ref{thirdcoveringpropertySbarv})(\ref{thirdcoveringpropertySbariii}), we note that we can always choose $\pi,\rho\in \overline{\mathfrak{S}}$ such that
\begin{eqnarray}
&&\pi,\rho\in \Theta^{\overline{\mathfrak{S}}}(\sigma_1),\;\;\sigma_1=\pi\sqcup_{{}_{\overline{ \mathfrak{S}}}} \rho \\
&&\sigma_4\sqcoversupset_{{}_{\overline{ \mathfrak{S}}}}\rho,\;\; \sigma_2\sqcoversupset_{{}_{\overline{ \mathfrak{S}}}}\pi\\
%&&\sigma_3,\sigma_4\sqcoversupset_{{}_{\overline{ \mathfrak{S}}}}(\sigma_3\sqcap_{{}_{\overline{ \mathfrak{S}}}}\sigma_4),\\
%&&\sigma_3,\sigma_2\sqcoversupset_{{}_{\overline{ \mathfrak{S}}}}(\sigma_3\sqcap_{{}_{\overline{ \mathfrak{S}}}}\sigma_2),\\
%&&\sigma_4,\sigma_2\sqcoversupset_{{}_{\overline{ \mathfrak{S}}}}(\sigma_4\sqcap_{{}_{\overline{ \mathfrak{S}}}}\sigma_2)
.
\end{eqnarray}
Let us now assume that there exists $\lambda\in {\mathfrak{G}}^{\check{\mathfrak{S}}}$ such that ${\mathfrak{r}}^U(\lambda,\sigma_1,\sigma_2)$ and ${\mathfrak{r}}^U(\lambda,\sigma_3,\sigma_4)$. We have then
\begin{eqnarray}
\lambda\sqcoversupset_{{}_{{ \mathfrak{S}}}} \pi, &&
\lambda\sqcoversupset_{{}_{{ \mathfrak{S}}}}(\sigma_3\sqcap_{{}_{\overline{ \mathfrak{S}}}}\sigma_4).
\end{eqnarray}
The consistency relation $\sigma_1 \asymp \sigma_3$ means that there exists $\epsilon\in \Theta^{\overline{ \mathfrak{S}}}(\sigma_1)$ such that $\epsilon\sqcoversubset_{{}_{\overline{\mathfrak{S}}}}\sigma_3$.
The consistency relation $\sigma_2 \asymp \sigma_4$ contains nothing.\\
Once again, we insist on the fact that $\epsilon\not=\pi$ (resp. $\epsilon\not=\rho$) because this would mean 
$\sigma_1\sqcap_{{}_{\overline{ \mathfrak{S}}}}\sigma_3=\epsilon=\pi=\sigma_1\sqcap_{{}_{\overline{ \mathfrak{S}}}}\sigma_2$ (resp. $\sigma_1\sqcap_{{}_{\overline{ \mathfrak{S}}}}\sigma_3=\epsilon=\rho=\sigma_1\sqcap_{{}_{\overline{ \mathfrak{S}}}}\sigma_4$). These degenerate cases have been excluded by assumption. \\
Let us now choose$\chi:=\sigma_1$ with $\kappa:=\sigma_2$ in (\ref{fundamentalformula}), and simultaneously $\chi:=\sigma_1$ with $\kappa:=\sigma_3$ in (\ref{fundamentalformula}), and simultaneously $\chi:=\sigma_1$ with $\kappa:=\sigma_4$ in (\ref{fundamentalformula}).  We obtain
\begin{eqnarray}
&&\pi \sqsupseteq_{{}_{\overline{\mathfrak{S}}}} (\sigma_2\sqcap_{{}_{\overline{ \mathfrak{S}}}}\rho),(\sigma_2\sqcap_{{}_{\overline{ \mathfrak{S}}}}\epsilon)\\
&&\epsilon \sqsupseteq_{{}_{\overline{\mathfrak{S}}}} (\sigma_3\sqcap_{{}_{\overline{ \mathfrak{S}}}}\pi),(\sigma_3\sqcap_{{}_{\overline{ \mathfrak{S}}}}\rho)\\
&&\rho \sqsupseteq_{{}_{\overline{\mathfrak{S}}}} (\sigma_4\sqcap_{{}_{\overline{ \mathfrak{S}}}}\pi),(\sigma_4\sqcap_{{}_{\overline{ \mathfrak{S}}}}\epsilon).
\end{eqnarray}
From $\pi \sqsupseteq_{{}_{\overline{\mathfrak{S}}}} (\sigma_2\sqcap_{{}_{\overline{ \mathfrak{S}}}}\rho)$ and $\pi\sqcoversubset_{{}_{\overline{\mathfrak{S}}}}\sigma_2$, we deduce $(\sigma_2\sqcap_{{}_{\overline{ \mathfrak{S}}}}\rho)=(\rho\sqcap_{{}_{\overline{ \mathfrak{S}}}}\pi)$. \\
From $\rho \sqsupseteq_{{}_{\overline{\mathfrak{S}}}} (\sigma_4\sqcap_{{}_{\overline{ \mathfrak{S}}}}\pi)$ and $\sigma_4\sqsupseteq_{{}_{\overline{ \mathfrak{S}}}}\rho$, we deduce $(\sigma_4\sqcap_{{}_{\overline{ \mathfrak{S}}}}\pi)=(\rho\sqcap_{{}_{\overline{ \mathfrak{S}}}}\pi)$.\\
From $\epsilon \sqsupseteq_{{}_{\overline{\mathfrak{S}}}} (\sigma_3\sqcap_{{}_{\overline{ \mathfrak{S}}}}\pi)$ and $\sigma_3\sqsupseteq_{{}_{\overline{ \mathfrak{S}}}}\epsilon$, we deduce $(\sigma_3\sqcap_{{}_{\overline{ \mathfrak{S}}}}\pi)=(\epsilon\sqcap_{{}_{\overline{ \mathfrak{S}}}}\pi)$.\\
On another part, from (\ref{thirdcoveringpropertySbariii}) (if $\lambda\in  ({\widecheck{\mathfrak{S}}}\smallsetminus \overline{\mathfrak{S}})$) or from (\ref{secondcoveringpropertySbar}) (if $\lambda\in \overline{\mathfrak{S}}{}^{{}^{pure}}$)
, we deduce
\begin{eqnarray}
(\sigma_3\sqcap_{{}_{\overline{ \mathfrak{S}}}}\pi)\sqcap_{{}_{\overline{ \mathfrak{S}}}}(\sigma_4\sqcap_{{}_{\overline{ \mathfrak{S}}}}\pi) \sqcoversubset_{{}_{\overline{ \mathfrak{S}}}} \pi.
\end{eqnarray}
As long as $\sigma_2\not=\sigma_3$, we have $\sigma_3\not\sqsupseteq_{{}_{\overline{ \mathfrak{S}}}}\pi_2$. Analogously, we have $\sigma_4\not\sqsupseteq_{{}_{\overline{ \mathfrak{S}}}}\pi_2$. Hence,
\begin{eqnarray}
(\sigma_3\sqcap_{{}_{\overline{ \mathfrak{S}}}}\pi)=(\sigma_4\sqcap_{{}_{\overline{ \mathfrak{S}}}}\pi)=
(\sigma_3\sqcap_{{}_{\overline{ \mathfrak{S}}}}\pi)\sqcap_{{}_{\overline{ \mathfrak{S}}}}(\sigma_4\sqcap_{{}_{\overline{ \mathfrak{S}}}}\pi) \sqcoversubset_{{}_{\overline{ \mathfrak{S}}}} \pi.
\end{eqnarray}
As a result, we obtain
\begin{eqnarray}
(\epsilon\sqcap_{{}_{\overline{ \mathfrak{S}}}}\pi) = 
(\rho\sqcap_{{}_{\overline{ \mathfrak{S}}}}\pi).
\end{eqnarray}
Hence, we have obtained a contradiction with the property (\ref{thirdcoveringpropertySbariv}).\\
As a conclusion, this configuration is excluded.\\

As a conclusion of the previous analysis we obtain that $\forall i\in \{1,2,3,4\}, \sigma_i\in \overline{\mathfrak{S}}{}^{{}^{pure}}$.

\section*{Statements and Declarations}
The author did not receive any support from any organization for the present work. The author declares that there is no conflict of interest. There is no associated data to this article.

%%%%%%%%%%%%%%%%%%%%%%%%%%%%%%%%%%%%%%%%%%%%%%%%%%%%%%%%%%%%%%%%%%%%%%
%\section{Bibliography}\label{sec:bib}

%%%%%%%%%%%%%%%%%%%%%%%%%%%%%%%%%%%%%%%%%%%%%%%%%%%%%%%%%%%%%%%%%%%%%%
%% BIBLIOGRAPHY
%%%%%%%%%%%%%%%%%%%%%%%%%%%%%%%%%%%%%%%%%%%%%%%%%%%%%%%%%%%%%%%%%%%%%%

%%%%%%%%%%%%%%%%%%%%%%%%%%%%%%%%%%%%%%%%%%%%%%%%%%%%%%%%%%%%%%
\end{document}